\newtheorem{thm}{Theorem}[section]
\newtheorem{conj}[thm]{Hypothesis}
\newtheorem{observation}[thm]{Observation}
\newtheorem{lemma}[thm]{Lemma}
\newtheorem{corollary}[thm]{Corollary}
\newtheorem{claim}[thm]{Claim}
\newtheorem{statement}[thm]{Statement}
\newtheorem{definition}[thm]{Definition}
\newtheorem{remark}[thm]{Remark}
\newtheorem{fact}[thm]{Fact}
\newcommand\E{\mathop{\mathbb{E}}}
\newcommand\card[1]{\left| {#1} \right|}
\newcommand\sett[2]{\left\{ \left. #1 \;\right\vert #2 \right\}}
\newcommand\set[1]{{\left\{ #1 \right\}}}
\newcommand\Prob[2]{{\Pr_{#1}\left[ {#2} \right]}}
\newcommand\cProb[3]{{\Pr_{#1}\left[ \left. #3 \;\right\vert #2 \right]}}
\newcommand\Expect[2]{{\mathop{\mathbb{E}}_{#1}\left[ {#2} \right]}}
\newcommand\cExpect[3]{{\mathbb{E}_{#1}\left[ \left. #3 \;\right\vert #2 \right]}}
\newcommand\norm[1]{\| #1 \|}
\newcommand\skipi{{\vskip 10pt}}
\newcommand\spn{{\sf Span}}
\newcommand\inner[2]{\langle{#1},{#2}\rangle}
\newcommand\eps{\varepsilon}
\renewcommand\geq{\geqslant}
\renewcommand\leq{\leqslant}
\newcommand{\x}{x}
\newcommand{\y}{y}
\newcommand{\z}{z}
\newcommand\T{\mathrm{T}}
\newcommand{\rom}[1]{\uppercase\expandafter{\romannumeral #1\relax}}
\newcommand{\nenm}{{\sf Non}\text{-}{\sf Embed}\text{-}{\sf Non}\text{-}{\sf Modest}}
\newcommand{\effnon}{{\sf eff}\text{-}{\sf nedeg}}
\newcommand{\Cons}{{\sf Cons}}
\newcommand{\newCons}{{\sf newCons}}
\def\ggg{\gtrsim}
\def\lll{\lesssim}
\title{On Approximability of Satisfiable k-CSPs: \rom{4}}
\author{Amey Bhangale\thanks{Department of Computer Science and Engineering, University of California, Riverside.}
	\and
	Subhash Khot\thanks{Department of Computer Science, Courant Institute of Mathematical Sciences, New York University. Supported by
		the NSF Award CCF-1422159, NSF CCF award 2130816, and the Simons Investigator Award.}
	\and
	Dor Minzer\thanks{Department of Mathematics, Massachusetts Institute of Technology. Supported by a Sloan Research
   Fellowship, NSF CCF award 2227876 and NSF CAREER award 2239160.}}
\date{\vspace{-5ex}}
\begin{document}
\maketitle
\begin{abstract}
 We prove a stability result for general $3$-wise correlations over distributions satisfying mild connectivity properties. More concretely, we
 show that if $\Sigma,\Gamma$ and $\Phi$ are alphabets of constant size, and $\mu$ is a distribution over $\Sigma\times\Gamma\times\Phi$
 satisfying:
 (1) the probability of each atom is at least $\Omega(1)$,
 (2) $\mu$ is pairwise connected,
 and
 (3) $\mu$ has no Abelian embeddings into $(\mathbb{Z},+)$, then the following holds.
 Any triplets of $1$-bounded functions $f\colon \Sigma^n\to\mathbb{C}$, $g\colon \Gamma^n\to\mathbb{C}$, $h\colon \Phi^n\to\mathbb{C}$
 satisfying
 \[
 \card{\Expect{(x,y,z)\sim \mu^{\otimes n}}{f(x)g(y)h(z)}}\geq \eps
 \]
 must arise from an Abelian group associated with the distribution $\mu$. More specifically, we show that there is an Abelian group $(H,+)$
 of constant size such that for any such $f,g$ and $h$, the function $f$ (and similarly $g$ and $h$) is correlated with a function of the form $\tilde{f}(x) = \chi(\sigma(x_1),\ldots,\sigma(x_n)) L (x)$,
 where $\sigma\colon \Sigma \to H$ is some map, $\chi\in \hat{H}^{\otimes n}$ is a character, and $L\colon \Sigma^n\to\mathbb{C}$ is a low-degree function with bounded $2$-norm.

 En route we prove a few additional results that may be of independent interest, such as an improved direct product theorem, as well as a result
 we refer to as a ``restriction inverse theorem'' about the structure of functions that, under random restrictions, with noticeable probability
 have significant correlation with a product function.

 In companion papers, we show applications of our results to the fields of Probabilistically Checkable Proofs,
 as well as various areas in discrete mathematics such as extremal combinatorics and additive combinatorics.
\end{abstract}
\section{Introduction}
\subsection{Studying $3$-wise Correlations with Respect to a Distribution}
Let $\Sigma$, $\Gamma$ and $\Phi$ be alphabets of constant size,
suppose $\mu$ is a distribution over $\Sigma\times\Gamma\times \Phi$, and let
$f\colon\Sigma^n\to\mathbb{C}$, $g\colon\Gamma^n\to\mathbb{C}$, $h\colon\Phi^n\to\mathbb{C}$ be $1$-bounded functions.
What sort of triplets of functions $f,g$ and $h$ have a significant $3$-wise correlation
with respect to $\mu$? In other words, what can be said about the functions $f$, $g$ and $h$ in the case that
\begin{equation}\label{eq:1}
\card{\Expect{(x,y,z)\sim \mu^{\otimes n}}{f(x)g(y)h(z)}}\geq \eps,
\end{equation}
where $\eps>0$ is thought of as a small constant?
In~\cite{Mossel}, it is shown that if $\mu$ is connected, then this can only be the case if each one of $f$, $g$ and $h$ is correlated
with a low-degree function. Here, we say that a distribution $\mu$ over $\Sigma_1\times\Sigma_2\times \Sigma_3$ is connected if
for any partition of $\{1,2,3\}$ into two sets $I\cup J$, the bi-partite graph between ${\sf supp}(\mu_I)$ and
${\sf supp}(\mu_J)$ whose edges are all $(a,b)$ if $(a,b)\in {\sf supp}(\mu)$, is connected ($\mu_I$ is the marginal distribution
of $\mu$ on the coordinates of $I$). In~\cite{BKMcsp1,BKMcsp2}, a
strengthening of this result is proved, and it is shown that it suffices that the distribution $\mu$ does not admit any \emph{Abelian embeddings}.
\begin{definition}
  An Abelian embedding of a distribution $\mu$ over $\Sigma\times\Gamma\times \Phi$ consists of an Abelian group $(H,+)$
  and $3$ maps $\sigma\colon \Sigma\to H$, $\gamma\colon \Gamma\to H$ and $\phi\colon \Phi\to H$ such that $\sigma(x) + \gamma(y) + \phi(z) = 0$
  for all $(x,y,z)\in {\sf supp}(\mu)$. We say that the embedding $(\sigma,\gamma,\phi)$ is non-trivial if at least one of the maps
  is not constant.
\end{definition}
\begin{definition}
  We say a distribution $\mu$ admits an Abelian embedding if it has a non-trivial Abelian embedding.
\end{definition}
In this language, the main result of~\cite{BKMcsp1,BKMcsp2} asserts that if $\mu$ does not admit an Abelian embedding and the probability
of each atom in $\mu$ is at least $\alpha$ thought of as a constant, then each one of $f$, $g$ and $h$ must be correlated with a low-degree
function. As it can easily be seen, this result is strictly stronger than the corresponding result in~\cite{Mossel} since any distribution
$\mu$ which is connected does not admit an Abelian embedding. Moreover, as explained in~\cite{BKMcsp1,BKMcsp2} this result
is an if and only if, in the sense that in the presence of Abelian embedding one could design $1$-bounded functions $f$, $g$ and $h$
for which~\eqref{eq:1} holds while at least one of the functions $f,g$ and $h$ only has $o(1)$-correlation with any low-degree function.

The main goal of this paper is to extend this understanding beyond the realm of distributions which do not have Abelian embeddings
and prove structural results on functions $f,g$ and $h$ satisfying~\eqref{eq:1} in this more general setting. At a high level, one
would like to say that such functions $f$, $g$ and $h$ could only arise as a result of using Abelian embeddings, using low-degree
functions, or both. To prove such result however, we must focus our attention on distributions $\mu$ satisfying a very mild form
of connectivity, which we refer to as pairwise connectedness.
\begin{definition}
  Let $\Sigma_1,\Sigma_2,\Sigma_3$ be finite alphabets, and let $P\subseteq \Sigma_1\times \Sigma_2\times \Sigma_3$.
  For a pair of distinct coordinates $i,j\in \{1,2,3\}$, we say $P$ is $\{i,j\}$ connected if the bipartite graph
  $G = (\Sigma_i\cup \Sigma_j, E_{i,j})$, where $E_{i,j}$ is the set of label pairs that appear in some element of $P$,
  is connected.

  We say $P$ is pairwise connected if it is pairwise connected for any two distinct $i,j\in\{1,2,3\}$.

  We say a distribution $\mu$ is pairwise connected if ${\sf supp}(\mu)$ is pairwise connected.
\end{definition}
At a high level, the notion of pairwise connectedness stems from the fact that if ${\sf supp}(\mu)$ is not pairwise connected, then
there are examples of functions satisfying~\eqref{eq:1} without any useful structure for our purposes. Indeed, if ${\sf supp}(\mu)$
is not pairwise connected -- without loss of generality it is not $\{1,2\}$-connected, then we may find a non-trivial partition
$\Sigma = \Sigma'\cup\Sigma''$ and $\Gamma = \Gamma' \cup \Gamma''$ so that in the support of $\mu$ there can only be pairs
from $\Sigma'\times \Gamma'$ and $\Sigma''\times \Gamma''$ on the first two coordinates. In this case, we may pick
any pair of functions $s,s'\colon \{1,2\}^n\to\mathbb{C}$ such that $s(a)s'(a) = 1$ for all $a\in \{1,2\}^n$ (for example, one can
take $s$ whose absolute value is always $1$, and $s'$ to be its conjugate) and construct $f,g,h$ as follows.
For $f$, we set $f(x) = s(x')$ where for each $i$, $x'_i = 1$ if $x_i\in \Sigma'$ and otherwise $x'_i = 2$.
For $g$, we similarly set $g(y) = s'(y')$ where for each $i$, $y'_i = 1$ if $y_i\in \Gamma'$ and otherwise $y'_i = 2$.
For $h$, we take $h\equiv 1$. Thus, for any $(x,y,z)\in {\sf supp}(\mu)$ we have that
\[
f(x)g(y)h(z) = s(x')s'(y') = 1,
\]
as we have that $x'=y'$ by construction.

Henceforth, we will focus our attention on distributions $\mu$ which are pairwise connected. With this in mind, as explained earlier
there are two ways of constructing functions $f$, $g$ and $h$ satisfying~\eqref{eq:1}:
\begin{enumerate}
  \item If ${\sf supp}(\mu)$ admits a linear embedding, say for simplicity a cyclic group $(H,+) = (\mathbb{Z}_p,+)$
and maps $\sigma\colon\Sigma\to H$, $\gamma\colon\Gamma\to H$ and $\phi\colon \Phi\to H$ not all constant such that
$\sigma(x)+\gamma(y)+\phi(z) = 0$, then one can take
\begin{align*}
&f(x_1,\ldots,x_n) = e^{\frac{2\pi i}{\card{H}}\left(\sigma(x_1)+\ldots+\sigma(x_n)\right)},
\qquad
g(y_1,\ldots,y_n) = e^{\frac{2\pi i}{\card{H}}\left(\gamma(y_1)+\ldots+\gamma(y_n)\right)},\\
&\qquad\qquad\qquad\qquad\qquad h(z_1,\ldots,z_n) = e^{\frac{2\pi i}{\card{H}}\left(\phi(z_1)+\ldots+\phi(z_n)\right)},
\end{align*}
and note that $f(x)g(y)h(z) = 1$ pointwise, hence~\eqref{eq:1} holds. More generally, for a general Abelian group $(H,+)$
one can pick non-trivial characters $\chi_1,\ldots,\chi_n\in \hat{H}$, define
\begin{align*}
f(x_1,\ldots,x_n) = \prod\limits_{j=1}^{n}\chi_j(\sigma(x_j)),
~~
g(y_1,\ldots,y_n) = \prod\limits_{j=1}^{n}\chi_j(\gamma(y_j)),
~~
h(z_1,\ldots,z_n) = \prod\limits_{j=1}^{n}\chi_j(\phi(z_j)),
\end{align*}
and note again that $f(x)g(y)h(z) = 1$ pointwise hence~\eqref{eq:1} holds.
  \item In general, it may also be the case that for a distribution $\mu$, low-degree functions also satisfy~\eqref{eq:1}.
  Indeed, in that case one may try to find univariate $1$-bounded functions $u\colon \Sigma\to\mathbb{C}$, $v\colon \Gamma\to\mathbb{C}$
  and $w\colon \Phi\to\mathbb{C}$ for which $\card{\Expect{(x,y,z)\sim \mu}{u(x)v(y)w(z)}}\geq \Omega(1)$, and then tensorize them to get
  \begin{align*}
f(x_1,\ldots,x_n) = \prod\limits_{j=1}^{d}u(x_j),
~~
g(y_1,\ldots,y_n) = \prod\limits_{j=1}^{d}v(y_j),
~~
h(z_1,\ldots,z_n) = \prod\limits_{j=1}^{d}w(z_j),
\end{align*}
which get value of $2^{-O(d)}$ in~\eqref{eq:1}.
\end{enumerate}

\subsection{Main Results}
With the above discussion in mind, one is tempted to conjecture that if $\mu$ is pairwise connected, then the only possible examples of triplets
of functions $f,g$ and $h$ satisfying~\eqref{eq:1} must come from the above template.
\subsubsection{The Stability Result}
The main result of this paper is a stability result that
formalizes this intuition, saying that under some mild assumptions on the distribution $\mu$,
if $f,g$ and $h$ are $1$-bounded functions achieving significant $3$-wise correlation as in~\eqref{eq:1},
then $f$ (and similarly $g$ and $h$) must be correlated with a product of an embedding type function as in the first recipe, with a low-degree
function as in the second recipe. The mild assumptions on $\mu$ correspond to it being pairwise connected (which is necessary, otherwise the
statement is simply false), and for technical reasons we also need an additional assumption, namely that $\mu$ cannot be embedded in the
Abelian group $(\mathbb{Z},+)$. We remark though that this additional assumption is, as far as we know, not necessary, but removing it seems
to require more ideas. With this in mind, a precise formulation of our main result is:
\begin{thm}\label{thm:main_stab_3}
  For all $m\in\mathbb{N}$, $\alpha>0$ and $\eps>0$, there are $d\in\mathbb{N}$ and $\delta>0$ such that the following holds.
  Suppose that $\mu$ is a distribution over $\Sigma\times\Gamma\times \Phi$ such that
  \begin{enumerate}
    \item The probability of each atom in $\mu$ is at least $\alpha$.
    \item The size of each one of $\Sigma,\Gamma,\Phi$ is at most $m$.
    \item The distribution $\mu$ is pairwise connected.
    \item $\mu$ does not admit an Abelian embedding into $(\mathbb{Z},+)$.
  \end{enumerate}
  Then, if $f\colon\Sigma^n\to \mathbb{C}$, $g\colon\Gamma^n\to \mathbb{C}$ and $h\colon\Phi^n\to \mathbb{C}$
  are $1$-bounded functions such that
  \[
  \card{\Expect{(x,y,z)\sim \mu^{\otimes n}}{f(x)g(y)h(z)}}\geq \eps,
  \]
  then there are $1$-bounded functions $u_1,\ldots,u_n\colon \Sigma\to \mathbb{C}$ and a function $L\colon \Sigma^n\to \mathbb{C}$ of degree at most $d$
  and $2$-norm at most $1$ such that
  \[
  \card{\Expect{x\sim \mu_x^{\otimes n}}{f(x)\cdot L(x)\prod\limits_{i=1}^{n}u_i(x_i)}}\geq \delta.
  \]

  Furthermore, there is $r\in\mathbb{N}$ depending only on $m$ and an Abelian embedding $(\sigma,\gamma,\phi)$
  of $\mu$ into an Abelian group $(H,+)$ of size at most $r$
  such that for all $i$, $u_i(x_i) = \chi_i(\sigma(x_i))$ where $\chi_i\in \widehat{H}$ is a character of $H$.

  Quantitatively, we have that
  \[
  d = {\sf poly}_{\alpha,m}(1/\eps),
  \qquad\qquad\qquad
  \delta = 2^{-{\sf poly}_{\alpha,m}(1/\eps)}.
  \]
\end{thm}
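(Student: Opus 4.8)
The plan is to first extract the Abelian structure intrinsic to $\mu$, then reduce the $3$-wise correlation hypothesis to a statement about random restrictions of $f$, and finally invoke the two auxiliary results promised in the abstract — the improved direct product theorem and the restriction inverse theorem.

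\emph{The embedding group.} Using conditions (2) and (4), I would begin with a purely structural lemma: a pairwise connected distribution $\mu$ over $\Sigma\times\Gamma\times\Phi$ with alphabets of size at most $m$ and no Abelian embedding into $(\mathbb{Z},+)$ possesses a \emph{universal} finite Abelian group $(H,+)$, of size $r=r(m)$, together with an embedding $(\sigma_0,\gamma_0,\phi_0)$ through which every Abelian embedding of $\mu$ into a finite group factors. The role of condition (4) is precisely to kill the free part of the universal object (which would otherwise contain a copy of $\mathbb{Z}^k$); what survives is torsion, and pairwise connectedness together with the bound on alphabet sizes forces $\card H\leq r(m)$. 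This is elementary linear algebra over the cycle space of the bipartite graphs attached to $\mu$.

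\emph{From a $3$-wise correlation to restriction structure.} This is the heart of the argument. Starting from $\card{\Expect{(x,y,z)\sim\mu^{\otimes n}}{f(x)g(y)h(z)}}\geq\eps$, I would freeze one coordinate-block, say $y$: for at least an $\eps/2$ fraction of $y\sim\mu_y^{\otimes n}$ one gets $\card{\Expect{(x,z)\sim\mu_{x,z\mid y}^{\otimes n}}{f(x)h(z)}}\geq\eps/2$, which is a $2$-wise correlation over the \emph{product} distribution $\bigotimes_i \mu_{x_i,z_i\mid y_i}$. Analyzing this two-variable, product situation coordinatewise — via the singular value decomposition of the conditional-expectation operator, whose degenerate directions are governed exactly by the Abelian embeddings of the two-variable distribution — shows that $f$ correlates with a product of univariate functions times a low-degree function, both depending on the frozen $y$. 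Averaging over a random choice of which coordinates to freeze (that is, over random restrictions) and running a second-moment/consistency argument to align the restriction-dependent product functions yields precisely the hypothesis of the restriction inverse theorem: with probability $\eps^{O(1)}$ over a random restriction $\rho$, the function $f_\rho$ has $\eps^{O(1)}$ correlation with a product function on the free coordinates. The improved direct product theorem is what lets one pass coordinates in and out while losing only polynomially, rather than exponentially, in the parameters. I expect this step to be the main obstacle — in particular controlling the reweighted conditional measures $\mu_{x\mid y}^{\otimes n}$ against $\mu_x^{\otimes n}$, and making the alignment of the per-restriction product functions quantitatively robust.

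\emph{Applying the inverse theorem and identifying the embedding.} Feeding the output of the previous step into the restriction inverse theorem gives that $f$ is $\delta$-correlated with a function $\tilde f(x)=\chi(\sigma(x_1),\ldots,\sigma(x_n))L(x)$, with $\sigma\colon\Sigma\to H$, a character $\chi\in\widehat{H}^{\otimes n}$, and $L$ of bounded degree and $\norm{L}_2\leq 1$; this already delivers the ``main'' conclusion with $u_i(x_i)=\chi_i(\sigma(x_i))$. It remains to see that $(\sigma,\chi)$ genuinely comes from an Abelian embedding of $\mu$ and to obtain the analogous structure for $g$ and $h$. For this I would substitute $f\approx\tilde f$ back into the hypothesis to get $\card{\Expect{(x,y,z)\sim\mu^{\otimes n}}{\chi(\sigma(x))L(x)g(y)h(z)}}\geq\eps^{O(1)}$, peel off the low-degree factor $L$ by a further random restriction (possible since $L$ is low-degree with bounded $2$-norm), and arrive at a clean $3$-wise correlation whose $f$-side is a pure tensor character $\prod_i\chi_i(\sigma(x_i))$. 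Pairwise connectedness of $\mu$ then propagates this across the bipartite graphs: it forces matching tensor characters on the $g$- and $h$-sides on almost all coordinates, which is exactly to say that $\chi_i\circ\sigma$ extends to maps $\gamma,\phi$ with $\sigma(x_i)+\gamma(y_i)+\phi(z_i)$ constant on $\supp(\mu)$, i.e.\ an Abelian embedding into $H$ (enlarging $r$ if needed), and it simultaneously produces the $\chi\cdot L$ structure for $g$ and $h$. Finally, the quantitative bounds $d={\sf poly}_{\alpha,m}(1/\eps)$ and $\delta=2^{-{\sf poly}_{\alpha,m}(1/\eps)}$ are read off by bookkeeping the (polynomially many) Cauchy--Schwarz and restriction steps, with the direct product theorem preventing a tower-type blow-up.
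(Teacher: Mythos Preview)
Your first step (the universal embedding group) matches the paper's \emph{master embedding} construction, and your third step (restriction inverse theorem) is indeed the endgame the paper uses. The problem is the middle step: your plan to freeze $y$ and run a coordinatewise two-variable SVD on $\mu_{x,z\mid y}^{\otimes n}$ is not how the paper proceeds, and as stated it does not go through. The conditional operator on coordinate $i$ depends on $y_i$, and its ``degenerate directions'' are governed by the \emph{two}-variable structure of $\mu_{x,z\mid y_i}$, not by the three-variable Abelian embeddings of $\mu$; there is no mechanism in your outline that forces the per-coordinate SVD-top vectors to be of the form $\chi_i\circ\sigma$ for a $\sigma$ coming from a genuine embedding of $\mu$. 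Moreover, the two-variable correlation $\card{\Expect{}{f(x)h(z)}}\geq\eps/2$ over a product measure gives that $f$ correlates with an averaging operator applied to $h$, not with a product function times a low-degree function; turning that into the restriction-inverse hypothesis is the whole difficulty, and your alignment argument across different $y$'s is exactly where the paper needs its heaviest machinery.

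What the paper actually does in place of your step 2 is the following, and none of it appears in your outline: (i) apply the \emph{path trick} (iterated Cauchy--Schwarz on alternating coordinates) to pass from $\mu$ to a distribution $\mu'$ in which the master embedding is \emph{saturated} (its image is all of $H$) and the induced distribution on $H^3$ is uniform on $a+b+c=0$; (ii) set up a basis of $L_2(\Sigma')$ split into ``embedding'' and ``non-embedding'' functions and prove, by a long inductive argument (Sections~5--9, involving merging symbols, maximality of counterexamples, and a delicate ``relaxed base case'' to bypass the \emph{Horn-SAT obstruction}), that the high non-embedding-degree parts of $f,g,h$ contribute negligibly to the $3$-wise correlation; (iii) only after this reduction does one average out the non-embedding coordinates, land in a pure group-equation setting, and run the Fourier/linearity-testing computation that produces a large character correlation --- which is then fed into the restriction inverse theorem. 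The Horn-SAT obstruction (triples with $f(x)=g(y)h(z)$ pointwise where $f$ vanishes somewhere, so no log can be taken) is precisely why a naive ``Cauchy--Schwarz until tight, then it's an embedding'' argument fails at the base case, and it is the reason the paper needs the path trick and the maximality/merging apparatus; your proposal does not confront this obstruction at all.
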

The proof of Theorem~\ref{thm:main_stab_3} is quite long, and in Section~\ref{sec:pf_overview} we give an overview of the steps we take
in the proof. Some of the steps require ingredients that may be of independent interest, and which we explain next.

\subsubsection{The Restriction Inverse Theorem}
The proof of Theorem~\ref{thm:main_stab_3} uses a result which we refer to as the Restriction Inverse Theorem and present next.
\paragraph{Restrictions and Random Restrictions.}
Restrictions and random restrictions are vital to our argument to go through, and the Restriction Inverse Theorem can be thought of
as a statement about them of independent interest. Given a function $f\colon (\Sigma^n,\mu^{\otimes n})\to\mathbb{C}$, a set of coordinates $I\subseteq [n]$
and $\tilde{x}\in \Sigma^{\overline{I}}$, the restricted function $f_{\overline{I}\rightarrow \tilde{x}}$ is a function from
$\Sigma^{I}$ to $\mathbb{C}$ defined as
\[
f_{\overline{I}\rightarrow \tilde{x}}(x') = f(x_I = x', x_{\overline{I}} = \tilde{x}),
\]
where $(x_I = x', x_{\overline{I}} = \tilde{x})$ is the point in $\Sigma^n$ whose $I$-coordinates are set according to $x'$, and whose $\overline{I}$-coordinates
are set according to $\tilde{x}$.

Random restrictions are restrictions in which either $I$, $\tilde{x}$ or both are chosen randomly. A typical setting we use is one where we have a parameter
$\rho>0$, and we pick $I\subseteq_{\rho} [n]$, by which we mean that we include each $i\in [n]$ in $I$ with probability $\rho$; we then choose $\tilde{x}\sim \mu^{\overline{I}}$. For
the purposes of this paper it is necessary to consider other (less standard) settings of random restrictions, but we will limit ourselves to this more typical
setting for the purposes of this introduction; we refer the reader to Section~\ref{sec:random_restrictions} for a discussion on the other settings we use.

\paragraph{Product functions.}
A function
$f\colon \Sigma^n\to\mathbb{C}$ is called a product function if there are $1$-bounded functions $f_1,\ldots,f_n\colon \Sigma\to\mathbb{C}$ such
that
\[
f(x_1\ldots,x_n) = \prod\limits_{i=1}^{n}f_i(x_i).
\]
It is clear that if $f$ is a product function, then any restriction of it is still a product function. Thus, with probability $1$, taking a random restriction of
$f$ yields a function which has perfect correlation with a product function. The Restriction Inverse Theorem is a statement about a deduction in the reverse direction:
suppose $f$ is a function that after random restriction it has a significant correlation with a product function. Is it necessarily the case that $f$ itself is correlated
with a product function?

As is usually the case with inverse-type questions, there are multiple regimes of parameters one may consider, and for us the most relevant regime is the so-called $1\%$ regime.
In this case, we have a parameter $\rho>0$ (which is small but bounded away from $0$) and a function $f\colon \Sigma^n\to\mathbb{C}$ such that
\begin{equation}\label{eq:intro_rest_inverse}
\Prob{I\subseteq_{\rho} [n], \tilde{x}\sim \mu^{\overline {I}}}{\exists p\colon\Sigma^{I}\to\mathbb{C}\text{ a product function such that }
\card{\inner{f_{\overline{I}\rightarrow \tilde{x}}}{p}}\geq \eps}
\geq \eps,
\end{equation}
and we wish to deduce a structural result about $f$. As discussed, such situations may arise when $f$ is a product function -- or more generally when it is correlated
with a product function. However, if $f$ is a low-degree function (or even if it is just correlated with a low-degree function), a random restriction of $f$ will be correlated
with a constant function with noticeable probability, and hence with a product function. The Restriction Inverse Theorem essentially says that these are the only
two ways that~\eqref{eq:intro_rest_inverse} can come about:
\begin{thm}[The Restriction Inverse Theorem, Informal]\label{thm:rest_inverse_intro}
  For all $\eps,\rho,\alpha>0$ and $m\in\mathbb{N}$ there are $d\in\mathbb{N}$ and $\delta>0$ such that the following holds.
  Suppose $\Sigma$ is a finite alphabet of size at most $m$, $\mu$ is a distribution over $\Sigma$ in which the probability of each atom is at least $\alpha$,
  and $f\colon(\Sigma^n,\mu^{\otimes n})\to\mathbb{C}$ is a $1$-bounded function satisfying~\eqref{eq:intro_rest_inverse}. Then there is a product function
  $p\colon\Sigma^n\to\mathbb{C}$ and a function $L\colon \Sigma^n\to\mathbb{C}$ of degree at most $d$ and $\norm{L}_2\leq 1$ such that
  \[
  \card{\inner{f}{pL}}\geq \delta.
  \]
\end{thm}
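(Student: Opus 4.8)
The plan is to prove the Restriction Inverse Theorem by a bootstrapping argument that converts the ``1\% on a random restriction'' hypothesis into structural information about $f$ itself. First I would unpack the hypothesis: with probability at least $\eps$ over $I\subseteq_\rho[n]$ and $\tilde x\sim\mu^{\overline I}$, the restricted function $f_{\overline I\to\tilde x}$ has $|\langle f_{\overline I\to\tilde x},p\rangle|\ge\eps$ for some product function $p=\prod_{i\in I}p_i$. The first step is to \emph{decouple the two halves of the restriction}: think of $I\subseteq_\rho[n]$ as first choosing $J\subseteq_{\sqrt\rho}[n]$ and then $I\subseteq_{\sqrt\rho}J$, so that the inner product over the $I$-coordinates conditioned on the $\overline I$-coordinates looks, on the $J$-block, like a $3$-wise (in fact $2$-wise here, since only $f$ is involved) correlation between $f$ restricted to $J$ and a product test function living on $J$. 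The key point is that a product function is itself the simplest instance of the kind of structured object appearing in the main correlation problem, so I can hope to feed the restricted statement into the already-developed machinery for $2$-wise/$3$-wise correlations.

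The heart of the argument is an \emph{invariance/averaging step}: I would average the product-test functions $p$ over the random choice of $(I,\tilde x)$, using that the number of ``good'' restrictions is an $\eps$-fraction, to identify a single product function (or a short list, or a low-degree averaged object) that already correlates with $f$ on a large, slowly-varying scale. Concretely, for each coordinate $i$ one extracts the ``typical'' univariate factor $p_i$ that appears in a noticeable fraction of good restrictions containing $i$; coordinates where no such dominant factor exists must be ``low-influence-like'' for $f$ after restriction, and their contribution gets absorbed into the low-degree error term $L$. This is exactly the dichotomy advertised in the theorem: either a coordinate behaves like a genuine product coordinate (captured by $p$), or it behaves like a junta/low-degree coordinate (captured by $L$ with $\|L\|_2\le 1$ and $\deg L\le d$). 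To make this rigorous I would use a regularity/energy-increment scheme: assume $f$ has no $\delta$-correlation with any $pL$ of the required form, and iteratively peel off product-times-low-degree pieces, each step reducing a bounded potential (the $2$-norm of $f$), deriving a contradiction after $\mathrm{poly}(1/\eps)$ steps — which is also where the quantitative bounds $d=\mathrm{poly}_{\alpha,m}(1/\eps)$ and $\delta=2^{-\mathrm{poly}_{\alpha,m}(1/\eps)}$ come from.

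Two technical ingredients would be invoked as black boxes along the way. First, a \emph{direct product theorem} (the improved one mentioned in the introduction): if a restriction of $f$ frequently looks like a product function, then the univariate marginals of $f$ along single coordinates, after freezing the rest randomly, are frequently close to fixed functions; this lets me commit to the candidate $p_i$'s consistently across the random restrictions rather than having them drift. Second, a \emph{level-$d$ inverse / small-set expansion style estimate} for the distribution $\mu^{\otimes n}$: after removing the product part, the residual correlation must be concentrated on low degrees, because on the $\rho$-noised space a function orthogonal to all low-degree functions has exponentially small correlation with any product function after $\rho$-random restriction. Combining the direct product control on the product part with the noise-decay control on the orthogonal part yields the decomposition $f \approx p\cdot L + (\text{negligible})$ in the correlation sense claimed.

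The main obstacle I anticipate is the \emph{consistency of the product factors across restrictions}: a priori the witnessing product function $p$ can depend arbitrarily on $(I,\tilde x)$, and different good restrictions could certify incompatible univariate factors on the same coordinate (e.g. $p_i$ and $c\cdot p_i$ for a phase $c$, or genuinely different functions on overlapping supports). Resolving this requires a careful ``gluing'' argument — showing that the phases and the choices of $p_i$ can be synchronized on a $1-o(1)$ fraction of coordinates, at the cost of a controlled low-degree correction — and this is precisely where the $\mu$-specific structure (each atom has probability $\ge\alpha$, so every coordinate is ``seen'' by the restriction distribution in a robust way) has to be used. Everything else is, in principle, an energy-increment bookkeeping exercise, but getting the gluing step to output a product function $p$ together with a genuinely bounded-norm, bounded-degree $L$ — rather than an unstructured error — is the crux.
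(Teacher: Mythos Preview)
Your proposal correctly identifies the two load-bearing ingredients --- a direct product theorem and a consistency/gluing step --- but the concrete plan has gaps, and the paper's route is different in important ways.

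The coordinate-by-coordinate extraction of a ``typical univariate factor $p_i$'' is where your argument would break. A product function $\prod_i p_i$ is only determined up to rescalings $p_i\mapsto c_i p_i$ with $\prod_i c_i=1$, so there is no canonical factor at a single coordinate to take a plurality over; two good restrictions sharing coordinate $i$ can witness genuinely different $p_i$'s even when the full products are close. The paper sidesteps this by never decomposing into univariate factors. Instead it works with \emph{whole} product functions: for each restriction it builds a short list ${\sf ShortList}_{\eps,\delta}[f_{\overline I\to\tilde x}]$ of size $O(1/\eps^2)$ capturing (up to small correlation) every product function correlated with that restriction, and then measures closeness of product functions by a \emph{symbolic} Hamming distance on the factor strings. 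This converts the gluing problem into a direct product testing problem in the precise sense of the DP test: the assignment $F[I']$ is a product function chosen from the short list of that restriction, and ``consistency on intersections'' is exactly $F[A]|_{A\cap B}=F[B]|_{A\cap B}$ in symbolic coordinates. A small-set-expansion argument (over the noise graph on restrictions) shows this $F$ passes the DP test with probability $\mathrm{poly}(\eps)$, and then the DP theorem outputs a global $g\colon[n]\to[R]$ --- i.e.\ a single global product function $p_g$ --- agreeing with $F[I']$ up to $O_\eps(1)$ symbolic errors on a $\delta$-fraction of $I'$.

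Your energy-increment scheme is not how the paper proceeds and, as stated, is underspecified: if you assume $f$ has no $\delta$-correlation with any $pL$, it is unclear what ``peeling off product-times-low-degree pieces'' means or why any potential decreases. The paper instead goes directly: once $p_g$ is in hand, one shows that $G=f\cdot\overline{p_g}$ has noticeable noise stability ${\sf Stab}_{1-\Theta(\rho\beta/r)}(G)$ (by rewriting the surviving correlations as averages over further restrictions), and then a single application of the level-$d$ inequality gives $L=G^{\le d}$ with $|\langle f,\overline{p_g}L\rangle|\ge\delta$. No iteration is needed; the low-degree part arises in one shot from stability, not from an increment.
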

We refer the reader to Section~\ref{sec:rest_inverse} for a more formal and general version of the Restriction Inverse Theorem. We remark that among other things,
we also give explicit dependency of $d$ and $\delta$ on $\eps$ and $\rho$. These quantitative aspects are important if one wishes to get decent quantitative bounds
in Theorem~\ref{thm:main_stab_3}, and we think they are also interesting in their own right.

\subsubsection{The Direct Product Theorem}
The proof of Theorem~\ref{thm:rest_inverse_intro} (and thus, in turn, of Theorem~\ref{thm:main_stab_3}) hinges on a direct product testing result, which may also be
of independent interest. The problem of direct product testing has its roots in the field of probabilistic checkable proofs and in particular in hardness amplification.
In this setting, one wishes to encode a function $f\colon [n]\to [R]$ (where $n$ is thought of as very large) by local pieces that, on the one hand allows for local
access to values of $f$. On the other hand, the encoding should be testable, in the sense that there is a test that only looks at a handful of locations of the encoding and
determines whether it is an encoding of an actual function $f\colon [n]\to [R]$, or whether it is far from the encoding of any such function.

Our application calls for a particular direct product tester that has been extensively studied in the
literature~\cite{DinurR06,DinurG08,ImpagliazzoKW12,IJKW,DinurS14,DinurFH19,BKMcsp3}.
In this tester, the function $f$ is encoded via its table of restrictions to sub-cubes of certain dimension.
Namely, given a parameter $k\in\mathbb{N}$ (which for us will be equal to $\rho n$, where $\rho$ should be thought of as a very small constant),
the direct product encoding of $f$ is the mapping $F\colon \binom{[n]}{k}\to [R]^k$ defined by
\[
F[A] = f|_{A}
\]
for all $A\subseteq [n]$ of size $k$.

The test we associate with this encoding is determined by two parameters, $\alpha,\beta\in (0,1)$ that also should be
thought of as small constants. Given a supposed table of restrictions $G\colon \binom{[n]}{k}\to [R]^k$, the test, which we call
${\sf DP}(\rho,\alpha,\beta)$, proceeds in the following way:
\begin{enumerate}
  \item Sample $C\subseteq [n]$ of size $\alpha k$ and sample $A,B\in\binom{[n]}{k}$ independently containing $C$.
  \item Sample $T\subseteq [n]$ of size $\beta n$.
  \item Query $G[A]$, $G[B]$ and check that $G[A]|_{C\cap T} = G[B]|_{C\cap T}$.
\end{enumerate}
In other words, the tester selected two sets $A,B$ that intersect on a sizable number of elements (at least $\alpha k$), then  a random subset of their shared
elements and checks that the local assignments $G[A]$ and $G[B]$ agree on this random subset of shared elements.

Note that this test is complete, in the sense that if $G$ is a legitimate direct product encoding, then it passes the test with probability $1$. Thus, as is usually the case,
the interesting aspect of this test is the soundness, which is equivalent to the following question. Suppose that the tester accepts a table $G\colon \binom{[n]}{k}\to [R]^k$
with probability at least $s$; is it necessarily the case that $G$ is somewhat close to a legitimate direct product testing codewords?

In the so-called $99\%$ regime, where the probability $s = 1-\eps$ is thought of close to $1$, this problem is completely understood, and in~\cite{DinurS14,DinurFH19} it is shown
that in this case there is a function $f\colon [n]\to[R]$ such that for at least $1-O(\eps)$ fraction of $A\in \binom{[n]}{k}$ it holds that $G[A] = f|_{A}$.

For us, the most so-called $1\%$ regime is more relevant, wherein the probability $s = \eps$ is thought of as close to $0$. In this case, one can no longer expect a strong conclusion
as in the $99\%$ regime. Instead, naturally one would expect that in this case, there would have to be a function $f\colon [n]\to[R]$ such that for at least
$\delta = \delta(\rho,\alpha,\beta,\eps)>0$ fraction of $A\in \binom{[n]}{k}$ it holds that $G[A] = f|_{A}$, but this is also too much to expect. Indeed, to see
that take any $g\colon [n]\to [R]$, and for each $A$ take $G[A]$ uniformly from $[R]^k$ with probability $1-\eps$, and otherwise take it to be a string in
$[R]^k$ of Hamming distance $r = \Theta(\log(1/\eps))$ from $g|_A$. Using Chernoff's bound, one can prove that with high probability there is no function $f\colon[n]\to[R]$
satisfying the natural conclusion one expects, yet the tester passes with probability at least $\eps^2 (1-\beta)^{2r} = {\sf poly}(\eps)$. The reason for that is
that looking at two locations $A, B$ queried by the tester, with probability $\eps^2$ both of them get assigned strings close to $g|_A$ and $g|_B$ respectively, in which
case with probability at least $(1-\beta)^{2r}$ the subset $T$ excludes all coordinates on which either $G[A]$ and $g|_{A}$, or $G[B]$ and $g|_{B}$, disagree on.

Due to a rather versatile set of examples, results in the $1\%$ regime are often more challenging to prove. Indeed, earlier results by~\cite{DinurG08,DinurS14} managed
to show that in this case there is a function $f\colon [n]\to [R]$ such that for at least $\delta = \delta(\rho,\alpha,\beta,\eps,\eta)>0$ fraction
of $A\in \binom{[n]}{k}$ it holds that $\Delta(G[A], f|_{A})\leq \eta k$. Here and throughout, $\Delta(x,y)$ represents the Hamming distance between strings $x$ and
$y$. The main drawback of this result is that the distance between $G[A]$ and $f|_{A}$ is linear in $k$, which is not good enough for our purposes. Indeed,
for our application we need a result that gets a Hamming distance which is a constant $r = r(\rho,\alpha,\beta,\eps)\in\mathbb{N}$ as opposed to a constant fraction.

In~\cite{BKMcsp3}, such result was proved for a more specialized version of this test in the case of $\beta = 1$ and $R = 2$. Therein, both of the parameters $\alpha$ and
$\rho$ are thought of as constant, and it is proved that there are $r = r(\eps,\alpha,\rho)\in\mathbb{N}$ and $\delta = \delta(\eps,\alpha,\rho)>0$ such that if
$G$ passes the test ${\sf DP}(\rho,\alpha,\beta=1)$ with probability at least $\eps$, then there is a function $f\colon [n]\to \{0,1\}$ such that for at least
$\delta$ fraction of $A\in \binom{[n]}{k}$ it holds that $\Delta(G[A], f|_{A})\leq r$. Besides being a natural question of interest, the motivation of this result
therein was to establish an earlier, less general version of the Restriction Inverse Theorem, Theorem~\ref{thm:rest_inverse_intro} herein.

In this paper, we are once again in a situation that our proof of a restriction inverse theorem requires a direct product testing result, and the relevant test
for us is the test ${\sf DP}(\rho,\alpha,\beta)$ above. Moreover, as herein we are concerned with getting good quantitative bounds, we no longer think of the
parameters $\rho,\alpha,\beta$ as constants and thus try to get reasonable dependencies of $r$ and $\delta$ on these parameters. For the purposes of this introductory
section however, we do not mention these quantitative aspects and defer the interested reader to Section~\ref{sec:dp}. Thus, without a concern for these quantitative
aspects our result reads:
\begin{thm}[The Direct Product Testing Theorem, Informal]\label{thm:dp_intro}
  For all $\eps,\rho,\alpha,\beta>0$ there are $r\in\mathbb{N}$ and $\delta>0$ such that the following holds for all $R\in\mathbb{N}$.
  For $k = \rho n$, if $G\colon \binom{[n]}{k}\to [R]^k$ is a function that passes the test ${\sf DP}(\rho,\alpha,\beta)$ with probability
  at least $\eps$, then there is a function $f\colon [n]\to [R]$ such that
  \[
  \Prob{A\in\binom{[n]}{k}}{\Delta(f|_A, G[A])\leq r}\geq \delta.
  \]
\end{thm}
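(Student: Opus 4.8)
The plan is to run the standard ``consistency graph'' machinery, but adapted to the $1\%$-regime and to the weak test ${\sf DP}(\rho,\alpha,\beta)$, with the key point being that we only need to recover $f$ on a $\delta$-fraction of subcubes and only up to constant Hamming distance $r$. First I would set up the \emph{consistency graph} $\mathcal{G}$ whose vertices are the pairs $(A, G[A])$ for $A\in\binom{[n]}{k}$, and whose (weighted) edges encode the probability, over the test's distribution on $(C,A,B,T)$, that $G[A]|_{C\cap T} = G[B]|_{C\cap T}$. The test acceptance probability $\eps$ says precisely that a random edge of the full ``overlap'' graph survives into $\mathcal{G}$ with probability $\eps$ after averaging out $T$. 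The first technical step is to remove the $T$: since $T$ is an independent $\beta$-random subset, agreement on $C\cap T$ with probability $\gtrsim\eps$ (over $T$) forces, by a simple averaging / reverse-Markov argument, that on a $\mathrm{poly}(\eps)$-fraction of $(C,A,B)$ the strings $G[A]$ and $G[B]$ actually agree on \emph{all but $O_\beta(\log(1/\eps))$ coordinates of $C$}. This is where the constant Hamming distance $r$ is born, and where the $\beta$-dependence enters; it is the analogue of the ``$(1-\beta)^{2r}$'' phenomenon described in the introduction, read in reverse.

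Next I would pass to the resulting graph $\mathcal{H}$ on the same vertex set, where $(A,G[A])\sim(B,G[B])$ iff $|C|=\alpha k$, $C\subseteq A\cap B$, and $\Delta(G[A]|_C, G[B]|_C)\le r$; we know a $\mathrm{poly}(\eps)$-fraction of the sampled $(C,A,B)$ give such an edge. The heart of the argument is a \emph{local-to-global} / plurality step: I would show $\mathcal{H}$ must contain a relatively dense ``cluster'' — a set $\mathcal{S}$ of subcubes of measure $\ge\delta$ that are pairwise consistent (on their common $\alpha k$-overlaps, up to distance $r$) — and then define $f\colon[n]\to[R]$ by letting $f(i)$ be the plurality value of $G[A]_i$ over those $A\in\mathcal{S}$ containing $i$. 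To produce such a cluster one runs the usual expander-mixing / second-moment argument: sample $A$, then two random neighbors $B_1,B_2$; with probability $\mathrm{poly}(\eps)$ both agree with $A$ on their respective overlaps, and one argues (using that $C$ is a $\Omega(\alpha)$-fraction of a random $k$-set, so two independent such $C$'s and their images pin down $G[A]$ on an $\Omega(\alpha)$-fraction of $A$) that $B_1$ and $B_2$ are themselves consistent on \emph{their} overlap. This ``transitivity up to a few errors'' — being careful that distances $r$ can add up to $2r$, $3r$, $\ldots$ along paths — is what lets one extract an honest-to-goodness globally consistent subfamily rather than just a locally consistent graph. Since $R$ is arbitrary (possibly huge), every estimate here must be entirely combinatorial and make no reference to $|R|$; this is automatic because the whole argument only ever compares strings for equality on small coordinate sets.

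Finally, with $f$ defined by plurality over the cluster $\mathcal{S}$, I would verify the conclusion: for a random $A\in\mathcal{S}$, a random neighbor $B$ agrees with $A$ on $\ge|C|-r$ coordinates of $C$, and $C$ is an $\Omega(\alpha)$-random subset of $A$; averaging over many such $B$ and invoking the plurality definition shows that $G[A]$ agrees with $f|_A$ on all but $O(r/\alpha\cdot\text{(error slack)})$ coordinates of $A$, i.e.\ $\Delta(f|_A, G[A])\le r'$ for a constant $r' = r'(\eps,\rho,\alpha,\beta)$; replacing $r$ by $r'$ gives the statement. The main obstacle, I expect, is the cluster-extraction step: in the $1\%$ regime the consistency graph need not be connected or have any single giant component, so one has to argue that \emph{some} component carries $\ge\delta$ measure \emph{and} is internally consistent in the strong sense that a single global $f$ explains it — the subtlety being that pairwise agreement up to $r$ errors is not transitive, so a naive ``connected component'' can have the errors drift arbitrarily far along long paths. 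The fix is to only use \emph{short} consistency certificates (paths of length $O(1)$ through a random common refinement), so that accumulated error stays bounded; making this quantitatively tight, and tracking how $r$ and $\delta$ depend on $\rho,\alpha,\beta,\eps$, is the bulk of the work and is deferred to Section~\ref{sec:dp}.
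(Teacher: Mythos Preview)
Your high-level strategy is in the right spirit --- strip out $T$ to get agreement-up-to-$O_\beta(\log(1/\eps))$ errors on the common part $C$, build a local structure, then glue to a global $f$ --- and the first step is essentially what the paper does (see the proof of Theorem~\ref{thm:DPtest_subsetagreement}). But the cluster-extraction step, which you correctly flag as the main obstacle, is where your proposal diverges from the paper and where it has a real gap.

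Your second-moment move (sample $A$, then two neighbors $B_1,B_2$, argue they are consistent with each other) only shows that the neighborhood of a \emph{fixed} $A$ is internally consistent; it does not show that neighborhoods centered at \emph{different} $A$'s coincide, which is what you need for a single global $f$ to explain a $\delta$-fraction of \emph{all} subcubes. Your ``short paths through a random common refinement'' does not address this: even length-$2$ paths through a common $A$ only connect you to the local cluster around that $A$, and there is no a priori reason these local clusters are not all of measure $o(1)$ and pairwise nearly disjoint. The paper resolves this in two stages. First, it introduces a \emph{modified test} (Figure~\ref{fig:test2}) in which the two queried sets share only a smaller random subset $D_0\subseteq A_0$; a Cauchy--Schwarz step (Claim~\ref{claim:newgood}) shows this test also passes with probability $\Omega(\eps^2)$, which lets one compare the local plurality functions $g_{A_0,B_0}$ and $g_{\tilde A_0,B_0'}$ for \emph{correlated but distinct} $A_0,\tilde A_0$ sharing $D_0$ (Claim~\ref{claim:correlated_g_closeness}). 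Second --- and this is the ingredient your sketch is missing --- the remaining correlation is broken by appealing to \emph{small-set expansion of a multi-slice graph} $\mathcal{G}_n$ on $\{0,1,2\}^n$ (Lemma~\ref{lemma:sse_prem}, proved in Section~\ref{section:sse} via hypercontractivity and eigenvalue bounds from~\cite{BravermanKLM22}): one partitions all pairs $(A,B)$ by the value of $g_{A,B}$ on a random small set $S$; if every part were small, SSE would force correlated pairs to land in different parts too often, contradicting the previous step, so some part is large and that part furnishes the global $f$ (Lemma~\ref{lemma:global_SSE_arg}). The constant-distance conclusion then comes from careful parameter tracking through the hierarchy~\eqref{eq:hier_dp2}, not from a single averaging as in your last paragraph. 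Finally, the reduction from general alphabet $[R]$ to binary, which you do not discuss, is handled separately by random coordinate-wise projections $f_i\colon[R]\to\{0,1\}$ (Section~\ref{sec:general_dp}).
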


\subsection{Applications and Motivations}
In this section, we discuss some applications and motivating fields and type of problems Theorem~\ref{thm:main_stab_3} (and possible extensions of it)
are likely to be related to. For some of them, we already have initial leads (and pursue them in subsequent papers as the current paper is already long
enough as is), while for others the connection is more speculative.

\subsubsection{Hardness of Approximation}
Recall Mossel's result~\cite{Mossel}, asserting that in the case that $\mu$ is a connected distribution only the low-degree part of functions contributes to~\eqref{eq:1}.
For low-degree functions, one has the invariance principle of~\cite{MOO}, and thus the combination of these two results can be seen as transforming expectations as in~\eqref{eq:1}
to expectations over Gaussian space. This result has a few notable striking consequences in the field of hardness of approximation. Most notably, Raghavendra~\cite{Rag08} uses
precisely such ideas to show the relationship between dictatorship tests and Gaussian rounding scheme to semi-definite relaxations.

In this light, the result proved in this paper shows that only functions that are ``characters times low-degree functions'' can contribute to~\eqref{eq:1}, and this suggests
that an invariance principle that extends the invariance principle of~\cite{MOO} should exist. Indeed, in a future work~\cite{BKMinv} we are exploring this direction and will
prove a more general such invariance principle, and discuss its relation to rounding schemes that combine semi-definite programming relaxations as well as linear programming
relaxations. We believe such invariance principles will be crucial in the journey of understanding the approximability of satisfiable constraint satisfaction problems.

\subsubsection{Higher Arity Predicates}
The original motivation behind the question considered in this paper is the non-Abelian embedding hypothesis of~\cite{BKMcsp1}, which is the following statement.
Suppose $k\geq 3$ is an integer, $\Sigma_1,\ldots,\Sigma_k$ finite alphabets and $\mu$ is a distribution over $\Sigma_1\times\ldots\times\Sigma_k$ in which
the probability of each atom is at least $\alpha > 0$. We say $\mu$ admits an Abelian embedding if there is an Abelian group $(H,+)$ and maps
$\sigma_i\colon\Sigma_i\to H$ for $i=1,\ldots,k$ such that $\sum\limits_{i=1}^{k}\sigma_i(x_i) = 0$ for all $(x_1,\ldots,x_k)\in {\sf supp}(\mu)$.
We say $\mu$ admits a non-trivial Abelian embedding if at least one of the maps $\sigma_i$ is non-constant.
\begin{conj}\label{hyp:bkm1}
  In the above setting, if $\mu$ admits no non-trivial Abelian embeddings, then for all $\eps>0$ there is $\delta>0$ such that
  if $f_i\colon \Sigma_i^n\to\mathbb{C}$ are $1$-bounded functions with ${\sf Stab}_{1/2}(f_i;\mu_i^{\otimes n})\leq \delta$ for
  at least one of the $i$'s, then
  \[
  \card{\Expect{(x_1,\ldots,x_k)\sim\mu^{\otimes n}}{\prod\limits_{i=1}^{k}f_i(x_i)}}\leq \delta.
  \]
\end{conj}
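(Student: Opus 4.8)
The plan is to deduce the case $k=3$---which is exactly the setting of the present paper---from Theorem~\ref{thm:main_stab_3}, and then to sketch an inductive strategy for $k\geq 4$ together with the obstacles that currently leave it open.

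For $k=3$, observe first that the single hypothesis of Conjecture~\ref{hyp:bkm1}---that $\mu$ has no non-trivial Abelian embedding---already implies hypotheses $(3)$ and $(4)$ of Theorem~\ref{thm:main_stab_3}: hypothesis $(4)$ is immediate since $(\mathbb{Z},+)$ is Abelian, and hypothesis $(3)$ follows because if $\mu$ were not $\{i,j\}$-connected, then sending each symbol of $\Sigma_i$ to the integer index of the component of the bipartite graph on $\Sigma_i\cup\Sigma_j$ (with edges the pairs co-occurring in ${\sf supp}(\mu)$) containing it, each symbol of $\Sigma_j$ to minus that index, and the map on the remaining alphabet identically $0$, would be a non-trivial Abelian embedding of $\mu$ into $(\mathbb{Z},+)$; hypotheses $(1)$ and $(2)$ hold automatically (with $m\leq 1/\alpha$). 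Now suppose towards the contrapositive that $\card{\Expect{(x_1,x_2,x_3)\sim\mu^{\otimes n}}{f_1(x_1)f_2(x_2)f_3(x_3)}}\geq\eps$. Theorem~\ref{thm:main_stab_3}, applied with $f=f_1$ (and, by symmetry of its hypotheses in the three coordinates, also with $f_2$ and $f_3$ in the first slot), produces for $f_1$ a function $L\colon\Sigma_1^n\to\mathbb{C}$ of degree at most $d$ with $\norm{L}_2\leq 1$ and $1$-bounded $u_i(x_i)=\chi_i(\sigma(x_i))$ coming from an Abelian embedding $(\sigma,\gamma,\phi)$ of $\mu$ such that $\card{\Expect{x\sim\mu_1^{\otimes n}}{f_1(x)L(x)\prod_i u_i(x_i)}}\geq\delta_0$; since $\mu$ has no non-trivial Abelian embedding, this $(\sigma,\gamma,\phi)$ is trivial, so $\sigma\equiv h_0$ and each $u_i$ equals the scalar $\chi_i(h_0)$ of modulus $1$, and absorbing their product gives $\card{\inner{f_1}{\overline{L}}}\geq\delta_0$. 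Finally, writing $f_1$ in its Efron--Stein decomposition with respect to $\mu_1^{\otimes n}$ and using that $\overline{L}$ has degree at most $d$ and $\norm{\overline{L}}_2\leq 1$, Cauchy--Schwarz shows that the Efron--Stein mass of $f_1$ carried on coordinate-sets of size at most $d$ is at least $\delta_0^2$, whence ${\sf Stab}_{1/2}(f_1;\mu_1^{\otimes n})\geq 2^{-d}\delta_0^2$, and likewise for $f_2$ and $f_3$. So, taking $\delta:=2^{-d-1}\delta_0^2$ where $d,\delta_0$ are the parameters of Theorem~\ref{thm:main_stab_3} for the given $\eps$ and $\mu$, if ${\sf Stab}_{1/2}(f_i;\mu_i^{\otimes n})\leq\delta$ for some $i$ then necessarily $\card{\Expect{(x_1,x_2,x_3)\sim\mu^{\otimes n}}{f_1(x_1)f_2(x_2)f_3(x_3)}}<\eps$. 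This settles $k=3$ with explicit bounds, recovering the case that follows from the results of~\cite{BKMcsp1,BKMcsp2}.

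For $k\geq 4$ the natural approach is induction on $k$, but the inductive hypothesis should be strengthened from Conjecture~\ref{hyp:bkm1} to a genuine $k$-wise analogue of Theorem~\ref{thm:main_stab_3}: that for $\mu$ pairwise connected with no embedding into $(\mathbb{Z},+)$, any $1$-bounded $f_1,\ldots,f_k$ with $\card{\Expect{(x_1,\ldots,x_k)\sim\mu^{\otimes n}}{\prod_i f_i(x_i)}}\geq\eps$ have each $f_i$ correlated with a character of an Abelian embedding of $\mu$ times a function of degree at most $d$ and $2$-norm at most $1$. Granting this for $k-1$, the inductive step would merge the last two coordinates into a super-alphabet $\Sigma'$ consisting of the pairs co-occurring in ${\sf supp}(\mu)$, producing a $(k-1)$-wise distribution $\mu'$ and the $1$-bounded function $F(w)=f_{k-1}(\pi_1 w)f_k(\pi_2 w)$; one applies the $(k-1)$-wise statement to $\mu',f_1,\ldots,f_{k-2},F$ and then ``unmerges'' the structure obtained for $F$ into structure on $f_{k-1}$ and $f_k$ individually, pulling the Abelian embedding of $\mu'$ back to one of $\mu$. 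The base case $k=3$ is Theorem~\ref{thm:main_stab_3}, and once the $k$-wise analogue is in hand, Conjecture~\ref{hyp:bkm1} follows by the same Efron--Stein argument as above, the no-embedding hypothesis again rendering the character factors constant.

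I expect the two non-routine steps of this strategy to be the main obstacles, and indeed they are why the general case is only a hypothesis. First, low noise stability of $f_{k-1}$ (or $f_k$) does \emph{not} imply low noise stability of the product $F=f_{k-1}f_k$: for instance, if the $\Sigma'$-marginal is supported on the diagonal $\{\pi_1 w=\pi_2 w\}$ one can take $f_{k-1}$ a non-trivial character and $f_k$ its conjugate, making $F\equiv 1$; so no soft ``stability-transfer'' argument can work and one genuinely needs the structural ``unmerging'' form of the statement. Second, passing to $\mu'$ can both destroy pairwise connectivity and create Abelian embeddings that $\mu$ does not possess---an embedding of $\mu'$ only assigns group elements to co-occurring pairs $(x_{k-1},x_k)$, and such an assignment need not split as $\sigma_{k-1}(x_{k-1})+\sigma_k(x_k)$ unless the bipartite co-occurrence graph between $\Sigma_{k-1}$ and $\Sigma_k$ carries no cyclic obstruction---which is exactly why the inductive hypothesis must be the embedding-tolerant Theorem~\ref{thm:main_stab_3}-type statement rather than Conjecture~\ref{hyp:bkm1} itself, and why tracking the pullback of its conclusion (and handling the case where $\mu'$ is no longer pairwise connected) is delicate. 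Accordingly I would attack the problem by first formulating and proving this $k$-wise strengthening of Theorem~\ref{thm:main_stab_3}, pushing the restriction inverse theorem and the direct product testing theorem of this paper to $k$ parties, and only then deriving Conjecture~\ref{hyp:bkm1}.
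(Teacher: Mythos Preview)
The statement you are addressing is a \emph{hypothesis}, not a theorem: the paper does not prove it and explicitly remarks that ``the current paper does not signify any further progress towards establishing Hypothesis~\ref{hyp:bkm1} beyond the case of $3$-ary predicates.'' So there is no proof in the paper to compare your attempt against.

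Your deduction of the case $k=3$ from Theorem~\ref{thm:main_stab_3} is correct. The verification that ``no non-trivial Abelian embedding'' forces pairwise connectedness and no $(\mathbb{Z},+)$-embedding is right, the symmetry argument that lets you conclude structure for each of $f_1,f_2,f_3$ is legitimate since the hypotheses of Theorem~\ref{thm:main_stab_3} are symmetric in the three coordinates, and the passage from $\card{\inner{f_i}{\overline{L}}}\geq\delta_0$ with ${\sf deg}(L)\leq d$ to ${\sf Stab}_{1/2}(f_i)\geq 2^{-d}\delta_0^2$ via low-degree weight is standard. (As written, the hypothesis has $\leq\delta$ rather than $\leq\eps$ in the conclusion, almost certainly a typo; your argument gives the intended $\leq\eps$ version, and shrinking $\delta$ handles the literal version too.) This recovers, through the present paper's machinery, what~\cite{BKMcsp1,BKMcsp2} had already established directly---as you yourself note.

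For $k\geq 4$ you do not claim a proof, and your diagnosis of the obstacles is accurate and aligned with the paper's own assessment: merging two coordinates into a super-alphabet can create new Abelian embeddings and destroy connectivity, so the inductive hypothesis must be the embedding-tolerant structural statement rather than the hypothesis itself, and the ``unmerging'' step is the genuine difficulty. This is precisely why the paper states the $k\geq 4$ case as open and suggests that the stability result proved here ``will be crucial towards making further progress in this direction.''
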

In~\cite{BKMcsp1} a special case of this hypothesis is proved for a class of $k=3$-ary distributions, and in~\cite{BKMcsp2} this hypothesis is proved in general for all
$k=3$-ary distributions. In these terms, the current paper does not signify any further progress towards establishing Hypothesis~\ref{hyp:bkm1} beyond the case of $3$-ary
predicates, however we believe that the stability version proved herein will be crucial towards making further progress in this direction.

\subsubsection{Gowers' Norms}
Theorem~\ref{thm:main_stab_3} can be seen as an analog of the $U_2$-inverse theorem for Gowers uniformity norms~\cite{Gowers} for general distributions.
In the context of Gowers uniformity norms, the $U_2$-inverse theorem is a simple Fourier analytic computation only involving Fourier coefficients.
Interestingly, at a point in our argument we too have to carry out such a computation (this is, however, a small part of the proof). It is tempting
to speculate, and we have initial leads for this fact, that there should be higher order analogs of Gowers inverse theorems in the much more general
setting of Theorem~\ref{thm:main_stab_3}.

If true, such statements could be very useful to make progress on multiple problems in extremal combinatorics,
and in particular in Szemer{\'e}di-type theorems~\cite{szemeredi1975sets}. This is so because it appears they are strong enough to facilitate density increment arguments.
Indeed, as we explain next, in a companion paper we have used Theorem~\ref{thm:main_stab_3} to give effective
bounds for the problem of finding restricted $3$-arithmetic progressions in dense sets in $\mathbb{F}_p^n$, for a prime $p$.

\subsubsection{Extremal Combinatorics}
A set $A\subseteq \mathbb{F}_p^n$ is called somewhat restricted $3$-AP free if it does not contain an arithmetic progression
$x,x+a,x+2a$ where $x\in\mathbb{F}_p^n$ and $a\in\{0,1,2\}^n\setminus\{\vec{0}\}$. In a companion paper~\cite{BKLMroth},
we use Theorem~\ref{thm:main_stab_3} to give effective bounds on the density of restricted $3$-AP sets:
\begin{thm}\label{thm:3AP}
  There are absolute constants $C>0$ and $1\leq k\leq 10$ such that if $A\subseteq\mathbb{F}_p^n$ is a restricted $3$-AP set, then
  \[
  \mu(A) = \frac{\card{A}}{p^n}\leq \frac{C}{\log^{(k)} n},
  \]
  where $\log^{(k)} n$ is the $k$-fold iterated logarithm function.
\end{thm}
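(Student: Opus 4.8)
The plan is to run a density-increment argument on $\mathbb{F}_p^n$ in which Theorem~\ref{thm:main_stab_3} plays the role of the inverse theorem, converting a deficiency of restricted $3$-APs into structure on the indicator $1_A$. We may assume $p\geq 5$ (for $p=2$ there are no nondegenerate restricted $3$-APs, and for $p=3$ a restricted $3$-AP is an ordinary line, so the cap-set bound already suffices), and we set $\delta=\mu(A)$. In one coordinate, the triple $(x_i,x_i+a_i,x_i+2a_i)$ with $x_i$ uniform in $\mathbb{F}_p$ and $a_i$ uniform in $\{0,1,2\}$ follows a fixed distribution $\nu$ on $\mathbb{F}_p^3$, so the number of restricted $3$-APs in $A$ — including the degenerate ones with $a=\vec0$, which contribute only $3^{-n}\delta$ — equals $\Expect{(x,y,z)\sim\nu^{\otimes n}}{1_A(x)1_A(y)1_A(z)}$; since $A$ is restricted-$3$-AP free this is exactly $3^{-n}\delta$, which is negligible.

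The first task is to verify that $\nu$ satisfies the hypotheses of Theorem~\ref{thm:main_stab_3}. Each atom of $\nu$ has probability $\tfrac{1}{3p}$, the alphabets have size $p$, and the three bipartite agreement graphs are circulants of $\mathbb{F}_p$ with connection sets $\{0,1,2\}$, $\{0,1,2\}$ and $\{0,2,4\}$, all connected since $\gcd(2,p)=1$; thus $\nu$ is pairwise connected. Finally, $\nu$ has no Abelian embedding into $(\mathbb{Z},+)$: applying the difference operator $Df(x)=f(x+1)-f(x)$ to the embedding relations $\sigma(x)+\gamma(x+a)+\phi(x+2a)=0$ for $a\in\{0,1,2\}$ forces $D\phi(x)=D\phi(x+2)$ for all $x$, hence $D\phi$ is constant, hence (summing over $\mathbb{Z}_p$) $D\phi\equiv0$, and then $\sigma,\gamma,\phi$ are all constant. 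The same computation shows every Abelian embedding of $\nu$, into any group, is built from the $\mathbb{Z}_p$-embedding $(\sigma,\gamma,\phi)=(fx,-2fx,fx)$, so the embedding-type function produced by Theorem~\ref{thm:main_stab_3} is always a genuine $\mathbb{F}_p$-linear phase $\omega_p^{\langle\xi,x\rangle}$, up to a global constant.

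Now write $1_A=\delta+f$; since all marginals of $\nu$ are uniform, $\Expect{}{f}=0$ and $\norm{f}_2^2\leq\delta$. Expanding the count,
\[
3^{-n}\delta=\delta^3+\delta\big(\Expect{\nu^{\otimes n}}{f(x)f(y)}+\Expect{\nu^{\otimes n}}{f(y)f(z)}+\Expect{\nu^{\otimes n}}{f(x)f(z)}\big)+\Expect{\nu^{\otimes n}}{f(x)f(y)f(z)},
\]
where the three terms linear in $f$ vanish. As the left-hand side is negligible, either one of the pairwise correlations is $\gtrsim\delta^2$ or the triple correlation is $\gtrsim\delta^3$; in the former case apply Theorem~\ref{thm:main_stab_3} with the remaining function set to $1$ (using pairwise connectedness), in the latter apply it directly. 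Either way $f=1_A-\delta$ has correlation at least $\delta'=2^{-{\sf poly}_p(1/\delta)}$ with a function $\omega_p^{\langle\xi,x\rangle}L(x)$, where $L$ has degree at most $d={\sf poly}_p(1/\delta)$ and $\norm{L}_2\leq1$ — equivalently, $1_A\overline{L}$ has a Fourier coefficient of size $\geq\delta'$ at $\xi$. Using arithmetic regularity for bounded families of degree-$\leq d$ polynomials over $\mathbb{F}_p^n$, replace $L$ (up to error negligible against $\delta'$) by a function of a bounded number of degree-$\leq d$ polynomials; refining the resulting factor by the level sets of $x\mapsto\langle\xi,x\rangle$ and invoking a counting/equidistribution lemma for these forms along the box $\{0,1,2\}^n$ produces an atom — a polynomial quasi-subspace of codimension bounded in terms of $d$ — on which $A$ has density at least $\delta+c\,\delta^{O(1)}$. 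After regularizing, this atom carries an isomorphic restricted-$3$-AP instance in some dimension $n'$, and we iterate. Each step multiplies the density by $1+c\,\delta^{O(1)}$, so after at most $\delta^{-O(1)}$ steps $n'$ must have fallen below an absolute constant; tracking the dimension loss per step (a tower whose height is controlled by $d={\sf poly}_p(1/\delta)$) and inverting yields $\delta\leq C/\log^{(k)}n$.

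The two genuinely delicate points are these. First, making the increment legitimate: the constraint $a\in\{0,1,2\}^n$ lives over a box, not a subgroup, so one cannot pass to an arbitrary affine subspace or polynomial level set and still have a restricted-$3$-AP instance; one must either work only with coordinate-respecting structure or prove a relative counting/equidistribution lemma over the box, and this — together with extracting a clean sub-instance — is the technical heart of the argument. Second, the quantitative bookkeeping: one must control the heights of the towers coming from the polynomial regularity lemma and from $d={\sf poly}_p(1/\delta)$, and in particular show that the $p$-dependence is absorbed by the iterated logarithm, so that the final constants $C$ and $k\leq10$ are absolute rather than functions of $p$. I expect (i) to be the main obstacle, with (ii) a close second.
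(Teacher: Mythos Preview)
The paper does not actually prove Theorem~\ref{thm:3AP}: it is announced as an application, with the proof deferred to the companion paper~\cite{BKLMroth}. What the paper does confirm is that the intended route is a density-increment argument driven by Theorem~\ref{thm:main_stab_3}, which is exactly your high-level plan. Your verification that the one-coordinate distribution $\nu$ satisfies the hypotheses of Theorem~\ref{thm:main_stab_3} is correct, as is your analysis that every Abelian embedding factors through $\mathbb{Z}_p$ via $(\sigma,\gamma,\phi)=(cx,-2cx,cx)$, so that the product functions $\prod_i\chi_i(\sigma(x_i))$ are genuine additive characters $\omega_p^{\langle\xi,x\rangle}$. The expansion of the count and the trichotomy between pairwise and triple correlations are also fine.

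The genuine gap is in the increment step, and it is not just that the details are missing---the proposed mechanism does not fit. You invoke ``arithmetic regularity for bounded families of degree-$\leq d$ polynomials'' to tame $L$, but $L$ is low-degree in the \emph{Efron--Stein} sense used throughout the paper: it lies in the span of characters $\chi_\eta$ with $\|\eta\|_0\leq d$, which is a space of dimension $\sum_{k\le d}\binom{n}{k}(p-1)^k$, and there is no reason $L$ should be (or be approximable by) a function of boundedly many algebraic polynomials. What the conclusion $|\langle f,L\chi_\xi\rangle|\geq\delta'$ actually yields is either Fourier mass $\geq\delta'^2$ of $1_A$ on the Hamming ball $\xi+\{\eta:\|\eta\|_0\leq d\}$, or---after a random coordinate restriction that collapses $L$ as in Lemma~\ref{lem:rest_to_correlation}---a single large linear Fourier coefficient of a restriction of $1_A$. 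Either way you face precisely the obstruction you flag as (i): a large linear Fourier coefficient gives a density increment on an \emph{affine hyperplane}, but the restricted-$3$-AP problem is invariant only under \emph{coordinate} restrictions, since the constraint $a\in\{0,1,2\}^n$ lives on a box. Your ``polynomial quasi-subspace'' does not respect this box structure, and no mechanism is offered for converting the linear-phase correlation into a coordinate-respecting increment; supplying one is the actual content of the companion paper. Your concern (ii) is also real: the ${\sf poly}_{\alpha,m}$ dependence in Theorem~\ref{thm:main_stab_3} has $m=p$ and $\alpha=1/(3p)$, and nothing in your sketch explains how the final constants $C,k$ become absolute in $p$.
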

Previously, the best known bound was $O(1/\log^{*} n)$, achieved by appealing to a quantitative version of the density Hales-Jewett theorem~\cite{polymath2012new}.
Theorem~\ref{thm:3AP} makes progress on a question of Green~\cite{Green} and on a question of Haszla, Holenstein and Mossel~\cite{Mossel}.

\subsubsection{Multi-Player Parallel Repetition Theorems}
Parallel repetition is a basic building block in the area of interactive protocols and in particular in applications in the field of hardness of approximation.
In the setting of $k$-player games, we have a basic game $\Psi$ involving a verifier and $k$ players. The game consists of a set of questions $X$ that are supposed
to get labels from a finite alphabet $\Sigma$, a predicate $P\colon X^k\times \Sigma^k\to\{0,1\}$ that gives $k$-challenges and answers to them dictates
whether these answers are deemed satisfactory or not, and a distribution $\mu$ over $k$-tuples of challenges. In the basic game $\Psi$, the verifier
samples a challenge $(x_1,\ldots,x_k)\sim \mu$, sends the question $x_i$ to the $i$th player, receives an answer $a_i\in\Sigma$ from player $i$, and then
accepts if and only if $P(x_1,\ldots,x_k,a_1,\ldots,a_k) = 1$. The value of the game, ${\sf val}(\Psi)$, is defined to be the maximum probability the verifier
accepts under the best strategy for the players.

The $t$-fold repeated game, $\Psi^{\otimes t}$, is a game in which the verifier samples $t$ sets of challenges, say $(x_{1,j},\ldots,x_{k,j})\sim \mu$ for $j=1,\ldots,t$
independently, sends $(x_{i,1},\ldots,x_{i,t})$ to player $i$, receives from them answers $(a_{i,1},\ldots,a_{i,t})$ and accepts if and only if
$P(x_{1,j},\ldots,x_{k,j},a_{1,j},\ldots,a_{k,j}) = 1$ for all $j=1,\ldots,t$. In words, the game is repeated for $t$-times, but in parallel, and the verifier
confirms that each one of the executions of the basic game was accepting. It is clear that ${\sf val}(\Psi^{\otimes t})\geq {\sf val}(\Psi)^{t}$, and the main
question of interest in parallel repetition theorems is regarding the rate of decay of ${\sf val}(\Psi^{\otimes t})$ as a function of $t$; in particular is this
decay exponential?

For $2$-player games, i.e.\ for the case that $k=2$, this problem is by now well understood, and it is known that the value of $\Psi^{\otimes t}$ is
indeed exponentially decaying in $t$ (however not in the most obvious manner); see~\cite{Raz,Holenstein,Rao,BravermanGarg,DinurSteurer}. The techniques
that go into these proofs are either information theoretical, or analytical. In a sense, the analytical proofs are based on the well-known fact that the
eigenvalues of a matrix tensorize when one tensorizes the matrix, as it turns out that, in a sense, the value of a game can be vaguely viewed as eigenvalues
of a matrix.

For $k\geq 3$, the situation is much more complicated, and the only known bound for general games is due to Verbitsky~\cite{Verbitsky} and gives rather weak bounds
(as, once again, it relies on the density Hales-Jewett theorem).

Recently, the work of~\cite{DHVY} identified a class of games referred to as ``connected games'' for which the information theoretic techniques from the setting of
$2$-player games still work, which sparked renewed interest in multi-player parallel repetition theorems. We remark that the notion of ``connectedness'' therein is
very much similar to the notion of connectedness of distribution in our setting (which is much stronger than pairwise connectedness). This motivated a recent
line of works~\cite{HR,GMRZ,GHMRZ,GHMRZ2} that studied parallel repetition of $3$-player games over binary questions. This line of work started with studying
a game known as the GHZ game (which is well known in the physics literature and is a bottleneck to the techniques of~\cite{DHVY}), proving polynomial decay for
it, and using this as a stepping stone to prove polynomial decay parallel repetition theorems for more general classes of games.

We believe that the notion of Abelian embeddability should have a fundamental connection to the problem of parallel repetition in multiplayer games.
In a sense, this question too is about ``tensorization'' of some value, but in this time one has to deal with $k$-dimensional tensors as opposed to
matrices. Some evidence to that has been given in~\cite{GHZgameBKM}, wherein the authors give a very simple proof for the fact that the value of the
GHZ game is exponentially vanishing with $t$ (as opposed to just polynomial) which is inspired by Abelian embeddability. In a sense, the proof proceeds
by identifying that the GHZ game actually entails within it a $(\mathbb{Z}_4,+)$-type additive structure. Then, using this fact along with powerful theorems
from additive combinatorics, the authors give a structural result on the set of strategies for the players that perform well, which are then analyzed directly.

While being speculative, we believe that such connection should indeed exist, and in it the quantitative aspects of Theorem~\ref{thm:main_stab_3} should be
highly relevant. At the current state, the quantitative bounds we get are not very good, but we believe that with more effort these could be improved to
results that would be able to show $2^{-t^{\Omega(1)}}$ rate of decay in parallel repetition.

\subsection{High Level Overview of the Proof of Theorem~\ref{thm:main_stab_3}}\label{sec:pf_overview}
In this section we give a high level overview of the proof of Theorem~\ref{thm:main_stab_3}. As such, we often omit details, make simplifying assumptions
and appeal to intuition in order to concentrate on the main ideas. We also point out the sections relevant to each part of the argument.

At its core, our argument relies on the following intuition: if $\mu$ does not admit any Abelian embedding, then Theorem~\ref{thm:main_stab_3} is just
equivalent to the main result of~\cite{BKMcsp1,BKMcsp2}. Thus, one idea is to try to identify all Abelian embeddings of $\mu$, define partial basis
for $L_2(\Sigma^n;\mu_x^{\otimes n})$, $L_2(\Gamma^n;\mu_y^{\otimes n})$ and $L_2(\Phi^n;\mu_z^{\otimes n})$ based on these Abelian embeddings and
then show that for $f,g$ and $h$ to satisfy~\eqref{eq:1}, it must be the case that they correlated with a function from the span of this partial basis.
The intuition is completing the partial bases into complete bases, once we ``peel off'' these embeddings based functions the rest of the functions in
the bases are ``oblivious'' to the fact that $\mu$ admits Abelian embeddings. So, once we ``peel off'' these embedding functions, the situation is
analogous to the case that $\mu$ does not have any Abelian embeddings, in which case the result of~\cite{BKMcsp1,BKMcsp2} kick in.

Much of the effort in our proof goes into formalizing this rough idea, and once one is able to do that the rest of the proof is more streamline (but
still requires a significant effort). Below, we give step by step description of the way we formalize this intuition.
\subsubsection{Step 1: Master Embedding}
  The first issue is a distribution $\mu$ may have multiple linear embeddings, and they may interact in a non-trivial way. Indeed, given an Abelian
  embedding $(\sigma,\gamma,\phi)$ of $\mu$ into $H$, one can define a partial basis by composing characters of $H$ with the embedding functions.
  But how do we know that different embeddings give us different basis elements? How do we combine these partial bases into something that makes
  sense and is convenient to work with?

  Our first step is to identify that one may define a single embedding, which we refer to as the master embedding, that encapsulates within it
  all of the Abelian embeddings of $\mu$. Indeed, we show that if $\mu$ does not have any $(\mathbb{Z},+)$ embedding, then
  there is a size $M$ such that any Abelian embedding of $\mu$ ``comes from'' an Abelian embedding of $\mu$ into an Abelian group of size at most $M$. Hence, to include all Abelian embeddings it suffices to only look
  into embeddings of $\mu$ into Abelian groups of size at most $M$, and as there are only finitely many such embeddings we can just tensorize them.
  That is, letting $\sigma_i\colon \Sigma\to H_i$
  be all possible $\sigma$'s in linear embeddings of $\mu$, where $(H_i,+)$ are Abelian groups, the master embedding of $x$ is
  $\sigma_{{\sf master}}\colon \Sigma\to\prod\limits_{i=1}^{R} H_i$ defined by
  \[
  \sigma_{{\sf master}}(x) = (\sigma_1(x),\ldots,\sigma_R(x)),
  \]
  and similarly one may define $\gamma_{{\sf master}}$ and $\phi_{{\sf master}}$. With the master embeddings in hand we now have a sensible way
  of defining a partial basis for functions in $x$, $y$ and $z$ by considering compositions of characters from $H$ with the master embeddings.

  At the present state, this partial basis is not very convenient. For example, it may well be the case that there are distinct $\chi,\chi'\in H$
  such that $\chi\circ\sigma_{{\sf master}} = \chi'\circ \sigma_{{\sf master}}$. Indeed, this would be the case if the image of $\sigma_{{\sf master}}$
  was a strict subgroup of $H$. More generally, linear dependencies within $\{\chi\circ\sigma_{{\sf master}}\}_{\chi\in\hat{H}}$ already start appearing
  as soon as the image of $\sigma_{{\sf master}}$ is not the entire group $H$, and this presents issues which we wish to avoid.

  \skipi

  This part of the argument appears in Section~\ref{sec:master_embed}.

\subsubsection{Step 2: Saturating the Master Embeddings}
Our goal is therefore to arrange for the master embeddings $\sigma_{{\sf master}}$, $\gamma_{{\sf master}}$ and $\phi_{{\sf master}}$ to be \emph{saturated},
meaning that the image of each one of them is the entire group $H$. To do so, we must change the distribution $\mu$ into a distribution $\mu'$ such
that (a) on $\mu'$ the master embeddings are saturated, (b) there is a good enough relationship between $3$-wise correlations over $\mu$ and $3$-wise
correlations over $\mu'$, and (c) we can deduce the conclusion of Theorem~\ref{thm:main_stab_3} on $\mu$ from the conclusion of Theorem~\ref{thm:main_stab_3}
on $\mu'$.

This transformation is achieved via the path trick, introduced in~\cite{BKMcsp1}, which is ultimately just an application of the Cauchy-Schwarz inequality.
The path trick is used in our arguments extensively, and often time the structure we need is quite subtle thereby requiring a very careful application of
the path-trick. Nevertheless, below we explain at a high level the intuition behind the path trick and what it achieves.

Given a distribution $\mu$, the path trick distribution (of length $2t+1$) with respect to $x$ can be described as the following distribution $\mu'$:
\begin{enumerate}
  \item Sample $(x_1,y_1,z_1)\sim \mu$.
  \item Make a step from $y$: sample $(x_1',y_1',z_1')\sim \mu$ conditioned on $y_1' = y_1$.
  \item Make a step from $z$: sample $(x_2,y_2,z_2)$ conditioned on $z_2 = z_1'$.
  \item Repeat make a step from $y$/ make a step from $z$ for $t$ times.
\end{enumerate}
Thus, the sequences $(y_1,y_1',y_2,y_2',\ldots,y_t,y_t',y_{t+1})$ of $y$'s and
$(z_1,z_1',z_2,z_2',\ldots,z_t,z_t',z_{t+1})$ $z$'s are generated (where $z_{i+1} = z_i'$ and $y_i' = y_i$),
as well as a sequence $\vec{x} = (x_1,x_1',\ldots,x_t,x_t',x_{t+1})$ of $x$'s.
The output of the distribution $\nu$ is $(\vec{x}, y_{t+1}, z_1)$, and it is thought of as a $3$-ary distribution
over $\Sigma'\times\Gamma\times\Phi$ where $\Sigma'\subseteq \Sigma^{2t+1}$ is the set of feasible tuples of
$x$ in the procedure.

We refer to this procedure as the path trick since one may consider the bi-partite graph $G = (\Gamma\cup \Phi, E)$ whose
edges are $(y,z)\in \Gamma\times \Phi$ that are in the support of $\mu|_{\Gamma\times \Phi}$. Thus, thinking of the $x$'es
as labeling the edges of $G$, namely labeling an edge $(y,z)$ by $x$ if $(x,y,z)\in{\sf supp}(\mu)$, one gets that the above
procedure generates a random path of length $2t+1$ in the graph and record the labels of the edges that it traversed on.

Moving from the distribution $\mu$ to $\mu'$ has several benefits that have been used in our earlier papers:
\begin{enumerate}
  \item \textbf{Improving connectivity:} if $\mu$ is $\{2,3\}$-connected, then for large enough $t$ the support of $\mu'$ on the last two coordinates is full. Indeed,
  taking the random path view of the path trick, it is clear that as the graph $G$ is connected, for sufficiently large $t$ the same graph corresponding to
  $\mu'$ would be a complete bipartite clique.
  \item \textbf{Preserving properties of $\mu$:} he distribution $\mu'$ preserves much of the properties of the distribution $\mu$.
  In particular, if $\mu$ is pairwise connected then so is $\mu'$, and if $\mu$ does not admit any Abelian embeddings, then so does $\mu'$.
  \item \textbf{The $3$-wise correlations relations:} $3$-wise correlations of functions over $\mu$ can be upper bounded by $3$-wise correlations of functions related to the original
  functions over $\mu'$. Indeed, assume for simplicity that the functions are real valued. If $f$, $g$ and $h$ achieve large $3$-wise correlation in $\mu$,
  then for $(x,y,z)\sim \mu^{\otimes n}$ one has that the values $h(z)$ and $f(x)g(y)$ are correlated, so looking at the above path we get that
  $h(z_{i+1}) \approx f(x_{i+1})g(y_{i+1})$ and $g(y_{i+1})\approx f(x_{i+1}')h(z_{i+1}')$ and combining these we get that
  \[
  h(z_{t+1}) \approx f(x_{t+1}')h(z_{t+1}') \approx  f(x_{t+1}')f(x_{t})g(y_t)
  \approx\ldots\approx f(x_{t+1}')f(x_{t})\cdots f(x_{2})f(x_{1}') g(y_1),
  \]
  and hence we expected $g, h$ and $F(\vec{x}) = f(x_{t+1}')f(x_{t})\cdots f(x_{2})f(x_{1}')$ to achieve a significant correlation in $\mu'$. Indeed, this can be proved
  via an appropriate application of the Cauchy-Schwarz inequality.
\end{enumerate}

For the purposes of this paper we need additional properties of the path trick transformations, which we explain next.
\begin{enumerate}
  \item \textbf{Abelian ebmeddings of $\mu$ lift to Abelian embeddings of $\mu'$:}
  not only does the path trick preserve lack of Abelian embeddings, but in fact if $\mu$ does admit Abelian embeddings,
then $\mu'$ does not introduce any new ones. To be more precise, suppose
that $\sigma\colon \Sigma\to H$, $\gamma\colon \Gamma\to H$ and $\phi\colon \Phi\to H$ are Abelian embeddings of $\mu$.
Then, these embeddings give rise to an Abelian embedding $\sigma_t\colon \Sigma'\to H$ with $\gamma$ and $\phi$ of $\nu$, as follows:
\begin{equation}\label{eq:intro_lift_embed}
\sigma_t(\vec{x}) = \sum\limits_{i=1}^{t}\sigma(x_i) - \sigma(x_i') + \sigma(x_{t+1}).
\end{equation}
With the notation above, we have that
$\sigma(x_i) + \gamma(y_i) + \phi(z_i) = 0$, $\sigma(x_i') + \gamma(y_i') + \phi(z_i') = 0$, and doing a proper addition/ substraction one gets that
\[
\sigma_t(\vec{x})+\gamma(y_{t+1}) + \phi(z_1) = 0,
\]
hence $(\sigma_t,\gamma,\phi)$ form an Abelian embedding of $\mu'$ into $H$.
  \item \textbf{The only Abelian embeddings of $\mu'$ are lifts of Abelian embeddings of $\mu$:} all Abelian embeddings of $\mu'$ are precisely of this form.
  Namely, for any Abelian embedding $(\sigma',\gamma,\phi)$ of $\mu'$ into an Abelian group $(H,+)$ there is an Abelian embedding $(\sigma,\gamma,\phi)$ of
  $\mu$ into $(H,+)$ where $\sigma$ satisfies a relation as in~\eqref{eq:intro_lift_embed} where $\sigma'$ plays the role of $\sigma_t$.
  (see Lemma~\ref{lem:reverse_embed}).This result has a few important consequences, and in particular it says that the path trick preserves master embeddings.
  Namely, if we start with a master embedding of $\mu$, apply the path trick and the above transformation corresponding to it on the embeddings,
  then we will get the master embedding of $\mu'$.

  \item \textbf{Saturating the embeddings:} it can be easily observed that if $(\sigma_{{\sf master}},\gamma_{{\sf master}}, \phi_{{\sf master}})$
  is a master embedding of $\mu$ (or for this purpose, any embedding of $\mu$), then after the path trick we get the embedding
  $(\sigma_{{\sf master}, t},\gamma_{{\sf master}}, \phi_{{\sf master}})$ that clearly satisfies that
  ${\sf Image}(\sigma_{{\sf master}, t})\subseteq {\sf Image}(\sigma_{{\sf master}})$; this follows by looking
  at trivial paths that traverse the same edge back and fourth and use the same label of $x$ all of the time.
  Moreover, it is clear that if ${\sf Image}(\sigma_{{\sf master}})$ was a sub-group of $H$ then we would have that
  ${\sf Image}(\sigma_{{\sf master}, t}) = {\sf Image}(\sigma_{\sf master})$. It stands to reason that unless ${\sf Image}(\sigma_{{\sf master}})$
  is indeed a subgroup, then for large enough $t$ we would have that ${\sf Image}(\sigma_{{\sf master}, t})\subsetneq {\sf Image}(\sigma_{{\sf master}})$,
  in which case we enlarged the image of $\sigma_{{\sf master}}$ via the path trick.

  Indeed, something along these lines is true. Namely, we show that by combination of path tricks along different directions (not only $x$) one can indeed
  always enlarge the image of an embedding so long as it is not a subgroup.\footnote{In our formal proof this has to be done rather carefully as we wish
  to preserve the property that the alphabet of $x$ is always a power of the original alphabet $\Sigma$.}
\end{enumerate}
In conclusion, using the path trick multiple times we can pass to a new distribution $\nu$ on which the embeddings are all saturated, $3$-wise correlations
over $\mu$ are upper bounded by $3$-wise correlations over $\nu$, and $\nu$ has improved connectivity -- say that its support on the last two coordinates is
full. It can be easily shown that in that case, the images of all of the components must be the same sub-group, and without loss of generality we assume it
is the group $H$ itself.

Note that in particular, the above properties mean that if $(\sigma_{{\sf master}},\gamma_{{\sf master}}, \phi_{{\sf master}})$ is a saturated master embedding
of $\nu$, then the distribution of $(\sigma_{{\sf master}}(x),\gamma_{{\sf master}}(y), \phi_{{\sf master}}(z))$ where $(x,y,z)\sim \nu$ has a full support
on
\[
\sett{(a,b,c)\in H^3}{a+b+c=0},
\]
which intuitively says that by moving from $\mu$ to $\nu$ we have ``exposed'' all of the Abelian structure in the distribution $\mu$.

\skipi

This part of the argument appears in Section~\ref{sec:master_embed}.
\subsubsection{Step 3: Setting up a Basis Consisting of Embedding and Non-embedding Functions}
Fix distributions $\mu$ over $\Sigma\times \Gamma\times \Phi$ and $\nu$ over $\Sigma'\times\Gamma'\times\Phi'$ as we have done so far, and suppose
that (a power of) the $3$-wise correlation of $f,g$ and $h$ over $\mu$ is upper bounded by the $3$-wise correlation of $F, G$ and $H$ over $\nu$.

Now that we have saturated the master embeddings in $\nu$ we can set up a partial for functions in $x\in {\Sigma'}$ as
basis as before $\tilde{\chi}(x) = \chi(\sigma_{{\sf master}}(x))$ for all $\chi\in \hat{H}$ as before and get that now these functions
are indeed linearly independent We can then complete it to a basis of $L_2(\Sigma'; \nu_x)$ by adding to it the functions $\psi_1,\ldots,\psi_{s}$ that
are orthogonal to all functions in ${\sf Span}(B_1)$, so that writing $B_1 = \sett{\tilde{\chi}}{\chi\in \hat{H}}$ and $B_2 = \set{\psi_1,\ldots,\psi_s}$ we have a basis
$B_1\cup B_2$ for $L_2(\Sigma'; \nu_x)$.
Tensorizing, we get that $\set{v_{\vec{b}}}_{\vec{b}\in (B_1\cup B_2)^{\otimes n}}$
where $v_{\vec{b}}\colon \Sigma'^{n}\to \mathbb{C}$ is defined by $v_{\vec{b}}(x) = \prod\limits_{i=1}^{n} v_{b_i}(x_i)$, is a basis for $L_2({\Sigma'}^n; \nu_x^{\otimes n})$.
Thus, we can write
\[
F(x_1,\ldots,x_n) = \sum\limits_{\alpha\in (B_1\cup B_2)^{n}}\widehat{F}(\alpha) v_{\alpha}(x).
\]
We can define analogous bases for $L_2(\Gamma'; \nu_y)$ and $L_2(\Phi'; \nu_z)$. Now, each one of the functions $F$, $G$ and $H$ has an ``embedding part'', which is the
parts of the monomials that use functions from $B_1$, and ``non-embedding parts'', which are monomials using functions from $B_2$. Intuitively, it should be the case
that the more mass the functions have on the non-embedding parts, the smaller the $3$-wise correlations would be; this is because that for uni-variate functions
$u\colon \Sigma'\to\mathbb{C}$, $v\colon \Gamma'\to\mathbb{C}$ and $w\colon \Phi'\to\mathbb{C}$ of $2$-norm $1$, to achieve perfect $3$-wise correlation it must be
the case that $u(x) = v(y)w(z)$ in the support of $\nu$, in which case $u,v$ and $w$ behave like an embedding function. We remark that there is a serious leap in
this last step, which causes complications in later points of the argument. Later on, we refer to this gap as the Horn-SAT obstruction, and we will explain how
it arises and how to overcome it later on.

In light of the above, it makes sense to define two notions of degrees for our partial basis. The first of which is the embedding degree of a monomial $v_{\vec{b}}$,
${\sf embeddeg}(v_{\vec{b}})$, which is the number of components $v_{b_i}$ that come from the partial embedding basis $B_1$. The second of which is
the non-embedding degree of a monomial $v_{\vec{b}}$, ${\sf nedeg}(v_{\vec{b}})$, which is the number of components of $v_{b_i}$ that come from $B_2$.

\skipi

This part of the argument appears in Section~\ref{sec:prep}.
\subsubsection{Step 4: Analyzing the Contribution of High Non-embedded Degree Components}
The above discussion suggests that the parts of $F$, $G$ and $H$ of high non-embedding degree should contribute very little to their $3$-wise correlation
according to $\nu$. Formally showing this, however, is quite tricky and this is where a considerable amount of effort in this paper is devoted to.
Our argument here builds on an argument from~\cite{BKMcsp2} and it is the main subject in
Sections~\ref{sec:prep},~\ref{sec:max_merg_mot},~\ref{sec:base_case},~\ref{sec:reudce_to_homogenous},~\ref{sec:reduce_to_near_linear},~\ref{sec:prove_near_lin}.

To give some intuition for the argument we make several simplifying assumptions (some of which can be ensured, while other are not necessary).
Assume that the marginal distribution of $\nu$ over $y,z$ is uniform, and that
the distribution of $(\sigma_{{\sf master}}(x),\gamma_{{\sf master}}(y), \phi_{{\sf master}}(z))$ where $(x,y,z)\sim \nu$ is uniform over
on
\[
\sett{(a,b,c)\in H^3}{a+b+c=0}.
\]
Further assume that the functions $F$, $G$ and $H$ are embedding homogenous and non-embedding homogenous functions, by which we mean that the embedding degree
of each monomial of $F$ is the same, and also the non-embedding degree of each monomial in $F$ is the same; the same goes for $G$ and $H$. Our argument here
will be inductive on the number of coordinates $n$, and we show that the $3$-wise correlation of functions $F$, $G$ and $H$ as above can be upper bounded
by either the $3$-wise correlation of $n-1$ variate functions of the same non-embedding degree, or by $(1-\Omega(1))$ times the $3$-wise correlation of $n-1$ variate
functions with non-embedding degree smaller by $1$. Thus, iterating we would ultimately get a bound of $(1-\Omega(1))^{{\sf nedeg}(F)}$ on the $3$-wise correlations,
which is small if the non-embedding degree of $F$ is high.

In fact, we have two separate inductive arguments depending on if $n$ is much larger than the non-embedding degree of $F$, or if it is comparable to it; we refer
to this last case as the ``near linear non-embedding degree case'', and we now elaborate on each one of these cases.

\paragraph{The case that $n$ is much larger than the non-embedding degree of $F$.} In this case there is a variable, say the $n$th variable,
such that in almost all of the mass of $F$ lies on monomials in which the component of $x_n$ is an embedding function.
Using the homogeneity of $F$ we can use find a decomposition of $F$ as
\[
\sum\limits_{t\in T} \psi_t F_t(x_1,\ldots,x_{n-1})F_t'(x_n)
\]
where each $F_t'$ is either from $B_1$ or from $B_2$, and $\set{F_t}, \set{F_t'}$ form orthonormal sets and $\sum\limits_{t\in T}\card{\psi_t}^2 = 1$.
Similarly, we can find analogous decompositions form $G$ and $H$ as
\[
\sum\limits_{r\in R} \kappa_r G_r(y_1,\ldots,y_{n-1})G_r'(y_n),
\qquad \qquad
\sum\limits_{s\in S} \rho_s H_s(z_1,\ldots,z_{n-1})H_s'(z_n).
\]
Moreover, if $F_t'$ is a function from $B_1$ then $F_t$ has the same non-embedding degree as $F$, and if $F_t'$ is from $B_2$ then $F_t$ has one smaller non-embedding degree.
The same goes for $G$ and $H$, so to simplify presentation we consider the specialized case where
\begin{align*}
&F(x) = \psi_1 F_1(x_1,\ldots,x_{n-1})F_1'(x_n)+\psi_2 F_2(x_1,\ldots,x_{n-1})F_2'(x_n),\\
&G(y) = \kappa_1 G_1(y_1,\ldots,y_{n-1})G_1'(y_n) + \kappa_2 G_2(y_1,\ldots,y_{n-1})G_2'(y_n),\\
&H(y) = \rho_1 H_1(z_1,\ldots,z_{n-1})G_1'(z_n) + \rho_2 H_2(z_1,\ldots,z_{n-1})H_2'(z_n),
\end{align*}
where $F_1', G_1'$ and $H_1'$ are embedding functions and $F_2', G_2'$ and $H_2'$ are non-embedding functions. Thus, the coefficient $\psi_2$ is related to the
mass $x_n$ has on non-embedding functions and by choice is therefore small, and similarly we can expect it to be the case that $\kappa_2$ and $\rho_2$ are also
small (which is true, but requires some preparatory work). Thus, the $3$-wise correlation of $F$, $G$ and $H$ according to $\nu$ can be written as
\begin{align}\label{eq:intro_inductive}
\Expect{\nu^{n}}{F G H}
&=\psi_1\kappa_1\rho_1 \Expect{\nu^{\otimes n-1}}{F_1 G_1 H_1} \Expect{\nu}{F_1'G_1'H_1'}
+\psi_1\kappa_1\rho_2 \Expect{\nu^{\otimes n-1}}{F_1 G_1 H_2} \Expect{\nu}{F_1'G_1'H_2'}\notag\\
&+\psi_1\kappa_2\rho_1 \Expect{\nu^{\otimes n-1}}{F_1 G_2 H_1} \Expect{\nu}{F_1'G_2'H_1'}
+\psi_1\kappa_2\rho_2 \Expect{\nu^{\otimes n-1}}{F_1 G_2 H_2} \Expect{\nu}{F_1'G_2'H_2'}\notag\\
&+\psi_2\kappa_1\rho_1 \Expect{\nu^{\otimes n-1}}{F_2 G_1 H_1} \Expect{\nu}{F_2'G_1'H_1'}
+\psi_2\kappa_1\rho_2 \Expect{\nu^{\otimes n-1}}{F_2 G_1 H_2} \Expect{\nu}{F_2'G_1'H_2'}\notag\\
&+\psi_2\kappa_2\rho_1 \Expect{\nu^{\otimes n-1}}{F_2 G_2 H_1} \Expect{\nu}{F_2'G_2'H_1'}
+\psi_2\kappa_2\rho_2 \Expect{\nu^{\otimes n-1}}{F_2 G_2 H_2} \Expect{\nu}{F_2'G_2'H_2'}.
\end{align}
It turns out that terms the only term involving $F_1'$ that does not vanish is $\Expect{\nu}{F_1'G_1'H_1'}$. Indeed, as $F_1'$ is a function from $B_1$
we may write it as a product of a function on $y$ with a function on $z$, and thus expectations such as $\Expect{\nu}{F_1'G_1'H_2'}$ can be written as
expectation of product of a function of $y$ and a function of $z$. Using independence, this product can be further written as the product of two expectations
where at least one of these expectations is $0$.

Thus, if the terms involving $F_2$ were not existent, then we would get the upper bound
\[
\card{\Expect{\nu^{n}}{F G H} }
\leq
\card{\psi_1\kappa_1\rho_1} \card{\Expect{\nu^{\otimes n-1}}{F_1 G_1 H_1}} \card{\Expect{\nu}{F_1'G_1'H_1'}}
\leq \card{\Expect{\nu^{\otimes n-1}}{F_1 G_1 H_1}},
\]
and we have decreased the number of variables $n$ by $1$ (while keeping the non-embedding degree.
In a sense, as $\psi_2$ is small this term indeed should constitute the majority of the contribution to
$\Expect{\nu^{\otimes n}}{F G H}$, but we cannot just ignore the contribution from the other terms.

A naive attempt at bounding the other term (and using the Cauchy-Scharz in a favorable way) can show that
$\card{\Expect{\nu^{\otimes n}}{F G H}}$ is at most the maximum of $\card{\Expect{\nu^{\otimes n-1}}{F_i G_j H_k}}$ over $i,j$ and $k$,
however this bound is not good enough for us; indeed, if this maximum is achieved at anywhere other than $i=j=k=1$
then the non-embedding degrees decrease, and in that case we must gain a factor of $(1-\Omega(1))$ for our argument to
go through.

The key to improve upon this naive attempt lies in what we refer to as the ``additive base case''. The additive base case is a statement about univariate functions
that helps us to control the contribution form terms involving $F_2'$ in a favorable way. Stated simply, the additive base case we use is the statement that if
$F'$ is a univariate non-embedding function, and $G'$, $H'$ are any univariate functions, then
\[
\card{\Expect{\nu}{F'(G'+H')}}\leq (1-\Omega(1))\norm{F'}_2\norm{G'+H'}.
\]
The intuition for this inequality is that otherwise, the value of $F'$ would be very close to the value of $\overline{G'} + \overline{H'}$,
but this is only possible for embedding functions.

The point of the additive base case is that except for $\Expect{\nu}{F_2'G_2'H_2'}$,
the contribution of the terms not involving $F_1'$ in~\eqref{eq:intro_inductive} may be re-casted as an expectation of the form dealt with in
the additive base case. Indeed, if $F_1', G_1'$ and $H_1'$ were the simplest of embedding functions -- namely constant functions -- then this
is rather clear, as these terms can be written as
\[
\Expect{\nu}{F_2' (a G_2 + b H_2)}
\]
for some coefficients $a$ and $b$. In the case $F_1'$, $G_1'$ and $H_1'$ are not the constant functions more effort is needed, and in particular
one needs to guarantee that they ``come'' from the same character of $H$. Namely, that there is $\chi\in \hat{H}$ such that $F_1'$, $G_1'$
and $H_1'$ are multiples of $\chi\circ \sigma_{{\sf master}}$, $\chi\circ \gamma_{{\sf master}}$ and $\chi\circ \phi_{{\sf master}}$ respectively.
As we are able to guarantee this fact, the contribution of these terms can still be associated with the additive base case, as essentially
$G_1' = \overline{F_1'}\overline{H_1'}$ and $H_1' = \overline{F_1'} \overline{G_1'}$. Hence, that contribution can be re-written as
\[
\Expect{\nu}{F_1'F_2' (a G_2 + b H_2)}
\]
for some coefficients $a$ and $b$, and this is still an expectation of the form handled by the additive base case.

Making an effective enough use of the additive base case, one can use~\ref{eq:intro_inductive} to either reduce $n$ by $1$ and keep the non-embedding degrees
the same, or else reduce both $n$ and the non-embedding degrees by $1$ and then gain a factor of $1-\Omega(1)$.

\skipi

This part of the argument is presented in Section~\ref{sec:reduce_to_near_linear}.
\paragraph{The near linear non-embedding degree case.}
Once the non-embedding degree of $F$, $G$ and $H$ is comparable to $n$, the above argument no-longer works, and the reason is that the
last term in~\eqref{eq:intro_inductive} is no longer negligible, and at the same time we do not know how to give an effective upper bound on
it using only the additive base case.
Thus we must have a new base case that handles this last term, and intuitively one would expect the following base case to hold.
Suppose that $\tilde{F}\colon \Sigma'\to\mathbb{C}$ is a function from $B_2$, and $\tilde{G}\colon \Gamma'\to\mathbb{C}$,
$\tilde{H}\colon \Phi'\to\mathbb{C}$ are any functions of $2$-norm $1$. Then
\[
\card{\Expect{(x,y,z)\sim \nu}{\tilde{F}(x)\tilde{G}(y)\tilde{H}(z)}}\leq 1 - \Omega(1).
\]
The reason we expect this to be true is that otherwise, by compactness we would be able to find $3$ such functions satisfying $\tilde{F}(x) = \tilde{G}(y)\tilde{H}(z)$
on the support of $\nu$. Thus, the logs of these functions form an Abelian embedding so $\log(\tilde{F})$ must be an embedding function, in which case $\tilde{F}$ is also
an embedding function in contradiction.

As stated, this argument is not quite correct, since it may be the case that the function $F'$ gets the value $0$ sometimes, in which case we cannot take
logs and get away with it. This obstruction has already appeared in~\cite{BKMcsp2} wherein it was referred to as the ``Horn-SAT obstruction'', and here too
we have to face it. In fact, as in our scenario we need to maintain many more properties of the distribution $\nu$ (compared to what was necessary in~\cite{BKMcsp2}),
more care is needed to handle the Horn-SAT obstruction. Ultimately, the Horn-SAT obstruction is dealt with by stating a more complicated base case statement
which we can guarantee to hold for the distribution $\nu$ while being useful enough to make our argument go through. For the simplicity of presentation however,
we ignore this obstruction for now and explain the argument in the case the ideal base case holds.

Equipped with the ideal base case, we can give effective enough bounds on the last term in~\eqref{eq:intro_inductive}. In particular, if all of the contribution
came from it, we would have been able to conclude that
\[
\card{\Expect{\nu^{\otimes n}}{F G H}}\leq (1-\Omega(1))\card{\Expect{\nu^{\otimes n-1}}{F_2 G_2 H_2}},
\]
and iterating would finish the proof. One again however, there are other terms in~\eqref{eq:intro_inductive} that need to be accounted for (the other terms
involving $F_2$). To do so, ideally we would have liked (as in the additive base case) to re-arrange these terms so as to view their total contribution as
an instantiation of the ideal base case, but this is not possible. Using a similar (but more complicated) argument we can still show that it is in fact
the case that
\[
\card{\Expect{\nu^{\otimes n}}{F G H}}\leq (1-\Omega(1))\card{\Expect{\nu^{\otimes n-1}}{\tilde{F_2} \tilde{G_2} \tilde{H_2}}},
\]
where $\tilde{F_2}, \tilde{G_2}$ and $\tilde{H_2}$ are functions of non-embedding degree at most $1$ less of $F_2$, $G_2$ and $H_2$.

\skipi
This part of the argument is presented in Section~\ref{sec:prove_near_lin}.

\paragraph{Overcoming the Horn-SAT Obstruction.}
The bulk of Sections~\ref{sec:prep},~\ref{sec:max_merg_mot},~\ref{sec:base_case},~\ref{sec:reudce_to_homogenous} is devoted to gaining additional properties of
$\nu$, as well as other crucial reductions (for example, to allow us to assume homogeneity of $F, G$ and $H$ as above). One of the key properties achieved
in this section is the so-called relaxed base case, which is a replacement for the ideal base case above that we are able to ensure.

A triplet of functions $u\colon \Sigma'\to\mathbb{C}$, $v\colon \Gamma'\to\mathbb{C}$ and $w\colon \Phi'\to\mathbb{C}$ is called a Horn-SAT embedding
if $u(x) = v(y)w(z)$ in the support of $\nu$. If $u$ never vanishes, a Horn-SAT embedding can be transformed into an Abelian embedding, and thus
(simply put) the Horn-SAT obstruction is really about the possible $0$-patterns non-embedding functions may have. By careful manipulations
(once again utilizing the path trick) we are able to find a set $\Sigma_{{\sf modest}}\subseteq \Sigma'$ of size at least $2$ such no Horn-SAT embedding
can vanish on. Thus, we get that if $u$ doesn't vanish on $\Sigma_{{\sf modest}}$ then it can never be a part of a Horn-SAT embedding. Therefore,
it is natural to expect that if $u$ has variance at least $\tau$ on $\Sigma_{{\sf modest}}$, then
\[
\card{\Expect{\nu}{u v w}}\leq 1-\theta(\tau)
\]
where $\theta(\tau)>0$ is some function of $\tau$. This turns out to be true and useful, but there are many subtleties. For once, we need additional properties
from $\Sigma_{{\sf modest}}$ to make this relaxed base case useful, and most important we need the symbols in $\Sigma_{{\sf modest}}$ to be mapped to the same
group element in $H$ by the master embedding. Secondly (and this has already appeared in~\cite{BKMcsp2}) we need a decent dependency between $\tau$ and $\theta(\tau)$.

\subsubsection{Step 5: Reducing to Functions over $H$}
We now return our functions $F$, $G$ and $H$, armed with the knowledge that the contribution of high non-embedding degree parts if small.
Thus, taking their parts of small non-embedding degree $F'$, $G'$ and $H'$, we are able to conclude that
$\card{\Expect{\nu^{\otimes n}}{F G H} - \Expect{\nu^{\otimes n}}{F'G'H'}}\leq \frac{\eps}{100}$, and so the $3$-wise correlation of $F'$, $G'$ and $H'$ according
to $\nu$ is still significant.

We remark that as in our actual argument we will need the functions $F'$, $G'$ and $H'$ to be bounded, so harsh truncations
as we described do not fit the bill. Thus, we use a softer notion of truncations given by the \emph{non-embedding noise operator}.
For $\rho\in [0,1]$ consider the Markov chain $\T_{\text{non-embed}, \rho}$ on $\Sigma'$ that on $x$, with probability $\rho$ takes $x' = x$, and otherwise
samples $x'\sim \nu|_x$ conditioned on $\sigma_{{\sf master}}(x') = \sigma_{{\sf master}}(x)$. When $\rho$ is not specified, that is, when we write
$\T_{\text{non-embed}}$, we mean that $\rho$ is taken to be $0$. Given such Markov chain one may consider the corresponding averaging operator on
$L_2(\Sigma',\nu_x)$ given as
\[
\T_{\text{non-embed}, \rho} f(x) = \Expect{x'\sim\T_{\text{non-embed}, \rho} x}{f(x')},
\]
and tensorize it to get an averaging operator $\T_{\text{non-embed}, \rho}^{\otimes n} \colon L_2({\Sigma'}^{n},\nu_x^{\otimes n})\to L_2({\Sigma'}^{n},\nu_x^{\otimes n})$.
Simiarly, we can get averaging operators on $L_2({\Gamma'}^{n},\nu_y^{\otimes n})$ and $L_2({\Phi'}^{n},\nu_z^{\otimes n})$, which by abuse of notation we also
denote by $\T_{\text{non-embed}, \rho}$. These averaging operators can be shown to essentially kill monomials of high non-embedding degree, hence serve as a replacement
for harsh truncation arguments as above.

With these operators in hand, we can replace the harsh truncations above by
$F' = \T_{\text{non-embed}, \rho}^{\otimes n} F$, $G' = \T_{\text{non-embed}, \rho}^{\otimes n} G$ and $H' = \T_{\text{non-embed}, \rho}^{\otimes n} H$ (for suitably
chosen $\rho$) and effectively
be in the same situation as before, wherein we have functions $F', G'$ and $H'$ that have almost all of their mass on monomials with small non-embedding degrees,
and also that $\card{\Expect{\nu^{\otimes n}}{F G H} - \Expect{\nu^{\otimes n}}{F'G'H'}}\leq \frac{\eps}{100}$

\skipi
We wish to transform the functions $F'$, $G'$ and $H'$ into related bounded functions with non-embedding degree $0$ for which the $3$-wise correlation over $\nu$
is still significant. For that, we use a combination of random restrictions (so as the mass of $F'$, $G'$, and $H'$ of small but not $0$ non-embedding degree would
almost all collapse to level $0$), followed by averaging (to get rid of all monomials of positive non-embedding degree). Thus, we get
functions $F''' = \T_{\text{non-embed}}(F'')$, $G''' = \T_{\text{non-embed}}(G'')$ and $H''' = \T_{\text{non-embed}}(H'')$ where $F''$, $G''$ and $H''$ are random
restrictions of $F', G'$ and $H'$, so that with noticeable probability we have that
\[
\Expect{(x,y,z)\sim \nu^{\otimes n'''}}{F'''(x)G'''(y)H'''(z)}\geq \frac{\eps}{2}
\]
where $n'''$ is the number of coordinates left alive after the random restriction.
Now the functions $F'''$, $G'''$ and $H'''$ can be viewed as functions defined over $H^{n'''}$,
so the above expectation should be amendable to standard tools from discrete Fourier analysis.

This part of the argument appears in Section~\ref{sec:the_hastad_argument}.
\subsubsection{Step 6: Applying the Linearity Testing Argument}
Indeed, we re-cast the functions $F'''$, $G'''$ and $H'''$ above as $F^{\sharp}\colon H^{n'''}\to \mathbb{C}$,
$G^{\sharp}\colon H^{n'''}\to \mathbb{C}$ and $H^{\sharp}\colon H^{n'''}\to \mathbb{C}$ defined in the natural way (for example,
$F^{\sharp}(a) = F(x)$ for $x$ such that $\sigma_{{\sf master}}(x_i) = a_i$ for each coordinate, where we know that the specific choice of
$x$ doesn't matter). Thus, from the distribution $\nu$ we get a corresponding distribution $\nu^{\sharp}$ over $\sett{(a,b,c)\in H^3}{a+b+c = 0}$
whose support is full, and
\[
\card{\Expect{(a,b,c)\sim (\nu^{\sharp})^{\otimes n'''}}{F^{\sharp}(a)G^{\sharp}(b)H^{\sharp}(c)}}\geq \frac{\eps}{2}.
\]
We now use random restrictions again, but for a different reason. Namely, we use random restrictions to
shift from the distribution $\nu^{\sharp}$ to the uniform over $\sett{(a,b,c)\in H^3}{a+b+c = 0}$, and
get from $F^{\sharp}$, $G^{\sharp}$ and $H^{\sharp}$ restrictions ${F^{\sharp}}'$, ${G^{\sharp}}'$ and ${H^{\sharp}}'$ so that with noticeable probability
\[
\card{\Expect{a,b\in H^{n''''}}{{F^{\sharp}}'(a){G^{\sharp}}'(b){H^{\sharp}}'(-a-b)}}\geq \frac{\eps}{4},
\]
where $n''''$ is the number of coordinates left alive.
In this case, a straightforward, classical Fourier analytic computation can be applied to relate the left hand side to the Fourier coefficients of
${F^{\sharp}}'$, ${G^{\sharp}}'$ and ${H^{\sharp}}'$  so that we get
\[
\card{\sum\limits_{\chi\in \hat{H}^{\otimes n''''}}\widehat{{F^{\sharp}}'}(\chi)\widehat{{G^{\sharp}}'}(\chi)\widehat{{H^{\sharp}}'}(\chi)}
\geq \frac{\eps}{4},
\]
from which one can quickly conclude that there is $\chi$ such that $\card{\widehat{{F^{\sharp}}'}(\chi)}\geq \frac{\eps}{4}$. In words, after a sequence of
random restrictions, averaging and further random restriction, the function $F$ is correlated with a function of the form $\chi\circ \sigma_{{\sf master}}$.
This is the type of result we are after, except that we wish to have such result for $F$ and not for $F$ after this sequence of operations.

\skipi
This part of the argument appears in Section~\ref{sec:the_hastad_argument}.
\subsubsection{Step 7: Going back to $F$ via the Restriction Inverse Theorem}
We have thus concluded that after random restriction, $F^{\sharp}$ is correlated with a function of the form $\chi\circ \sigma_{{\sf master}}$
where $\chi\in\hat{H}^{n''''}$, and we wish to unravel the steps we took to get from $F$ to $F^{\sharp}$ and conclude a structural result about $F$.

Noting that $\chi\circ \sigma_{{\sf master}}$ is a product function, this is precisely a situation in which the
restriction inverse theorem kicks in, and using a modified version of Theorem~\ref{thm:rest_inverse_intro} we are able to conclude that
$F^{\sharp}$ is correlated with a function of the form $L\circ \sigma_{{\sf master}}\cdot \chi\circ\sigma_{{\sf master}}$ where
$\chi\in\hat{H}^{n'''}$ and $L$ is a low-degree
function of $2$-norm at most $1$. Thus, the same conclusion holds for $F'''$ (as it is essentially the same function as $F^{\sharp}$.

Recalling that $F''' = \mathrm{T}_{\text{non-embed}, 0}F''$, we get that
\[
\card{\inner{\mathrm{T}_{\text{non-embed}, 0}F''}{L\circ \sigma_{{\sf master}}\cdot \chi\circ\sigma_{{\sf master}}}}\geq \frac{\eps}{4},
\]
but on the other hand we also have that
\begin{align*}
\card{\inner{\mathrm{T}_{\text{non-embed}, 0}F''}{L\circ \sigma_{{\sf master}}\cdot \chi\circ\sigma_{{\sf master}}}}
&=
\card{\inner{F''}{\mathrm{T}_{\text{non-embed}, 0}^{*}(L\circ \sigma_{{\sf master}}\cdot \chi\circ\sigma_{{\sf master}})}}\\
&=
\card{\inner{F''}{\mathrm{T}_{\text{non-embed}, 0}(L\circ \sigma_{{\sf master}}\cdot \chi\circ\sigma_{{\sf master}})}}\\
&=
\card{\inner{F''}{L\circ \sigma_{{\sf master}}\cdot \chi\circ\sigma_{{\sf master}}}},
\end{align*}
where we used the fact that $\mathrm{T}_{\text{non-embed}, 0}$ is self-adjoint. Hence, we conclude that $F''$ is correlated with
$L\circ \sigma_{{\sf master}}\cdot \chi\circ\sigma_{{\sf master}}$.

We now wish to unravel the last step of random restriction (that goes from $F$ to $F''$), and for that we once again want to appeal to
the restriction inverse theorem. However, the correlations we are talking about now are not quite about correlations with product functions.
Amusingly, to circumvent this issue we apply more random restrictions. Intuitively, after a suitably chosen random restriction, the function
$L$ becomes close to constant, hence one expects the fact that $F''$ is correlated with $L\circ \sigma_{{\sf master}}\cdot \chi\circ\sigma_{{\sf master}}$
to convert to the fact that a random restriction of $F''$ is correlated with a restriction of $\chi\circ\sigma_{{\sf master}}$ (which is a product
function), and we show that this is indeed the case. Thus, we conclude that a random restriction of $F''$ is correlated with a function of the
form $\chi\circ\sigma_{{\sf master}}$. Noting that a random restriction of $F''$ is (overall) a random restriction of $F$ (with different parameters),
we are thus able to conclude from the restriction inverse theorem that $F$ itself is correlated with a function of the form
$L\cdot \chi\circ\sigma_{{\sf master}}$.

\skipi
This part of the argument appears in Section~\ref{sec:the_hastad_argument}.

\subsubsection{Step 8: Going back to $f$ via Properties of the Master Embedding}
The last step in the proof of Theorem~\ref{thm:main_stab_3} is to use the structural result obtained for the function $F$ to deduce a similar structural result for $f$.
For that, we recall that (ignoring complex conjugates) the value of $F(x_1,\ldots,x_{s})$ is $f(x_1)\cdots f(x_s)$, and, ignoring the low-degree part $L$ for now,
we know that $F$ is correlated with $\chi\circ \sigma_{{\sf master}}$ for some $\chi\in\hat{H}^{n}$. Recalling the relation~\ref{eq:intro_lift_embed}, one quickly gets
from it that
\[
\chi\circ \sigma_{{\sf master}}(x_1,\ldots,x_{s})
=\prod\limits_{i=1}^{s}\chi\circ\sigma_{{\sf master}}(x_i),
\]
where, by abuse of notation, $\sigma_{{\sf master}}$ on the right hand side is the master embedding of $\mu$ (which is the original distribution, prior to
any application of the path trick). Hence, the correlation between $F$ and $\chi\circ \sigma_{{\sf master}}$ translates to the fact that
\[
\card{\Expect{(x_1,\ldots,x_s)\sim \nu_x^{\otimes n}}{\prod\limits_{i=1}^{s}f(x_i)\chi\circ\sigma_{{\sf master}}(x_i)}}\geq \eps' = \eps'(\eps)>0
\]
As discussed earlier, in~\cite{Mossel} it is shown that if $\nu_x$ is a connected distribution, a correlation such as in the above can be noticeable
only if $f\cdot \chi\circ\sigma_{{\sf master}}$ is correlated with a low-degree function. Thus, the proof would be concluded if we are able to
ensure connectivity of $\nu_x$, which we are indeed able to. This requires some care in some of our earlier steps, and most notably in the way we
apply the path trick. In fact, we are able to guarantee that the support of $\nu_x$ is full, that is, $\Sigma^{s}$.

Bringing the low-degree part $L$ back, essentially the same argument works except that we need to apply a suitable random restriction beforehand to get
rid of the low-degree part. Thus, the previous argument gives that with noticeable probability, a random restriction of $f\cdot \chi\circ\sigma_{{\sf master}}$
is correlated with a low-degree function. Hence, after more random restrictions, we conclude that with noticeable probability a random restriction of
$f\cdot \chi\circ\sigma_{{\sf master}}$ is correlated with a constant function. Re-phrasing, this means that with noticeable probability a random restriction
of $f$ is correlated with a function of the form $\chi\circ\sigma_{{\sf master}}$, and a final invocation of the restriction inverse theorem finishes the proof.
\skipi

This part of the argument appears in Section~\ref{sec:unraveling}.

\section{Preliminaries}
\paragraph{Notations.} We denote $[n] = \set{1,\ldots,n}$.
For a vector $x\in \Sigma^n$ and a subset $I\subseteq [n]$ of coordinates, we denote by
$x_I$ the vector in $\Sigma^{I}$ which results by dropping from $x$ all coordinates outside $I$. We denote by
$x_{-I}$ the vector in $\Sigma^{n-\card{I}}$ resulting from dropping from $x$ all coordinates from $I$;
if $I = \{i\}$ we often simplify the notation and write it as $x_{-i}$.  For $I\subseteq[n]$, $a\in \Sigma^{I}$
and $b\in \Sigma^{n-\card{I}}$ we denote by $(x_I = a, x_{-I} = b)$ the point in $\Sigma^{n}$ whose $I$-coordinates
are filled according to $a$, and whose $\overline{I}$-coordinates are filled according to $b$. For two strings $x,y\in \Sigma^n$
we denote by $\Delta(x,y)$ the Hamming distance between $x$ and $y$, that is, the number of coordinates $i\in [n]$ such that $x_i\neq y_i$.

We denote $A\lll B$ to refer to the fact that $A\leq C\cdot B$ for some absolute constant $C>0$,
and $A\ggg B$ to refer to the fact that $A\geq c \cdot B$ for some absolute constant $c>0$.
If this constant depends on some parameter, say $m$, the corresponding notation is $A\lll_m B$. We will also
use standard big-$O$ notations: we denote $A = O(B)$ if $A\lll B$, $A = \Omega(B)$ if $A\ggg B$; if
there is dependency of the hidden constant on some auxiliary parameter, say $m$, we denote $A = O_m(B)$
and $A = \Omega_m(B)$.

We denote by ${\bm i}$ the complex root of $-1$, and by $\overline{a}$ the complex conjugate of the number $a\in\mathbb{C}$.
For a matrix $M$, we denote by $M^{*}$ the conjugate transpose matrix of $M$.

\subsection{Product Spaces}
Let $(\Sigma^n,\mathcal{D}^{\otimes n})$ be a probability space. We often work with the space
$L_2(\Sigma^n; \mathcal{D}^{\otimes n})$ of complex valued functions with finite values. We think of this space
as an inner product space, where the inner product of $f,f'\colon \Sigma^n\to\mathbb{C}$ is defined by
\[
\inner{f}{f'}_{L_2(\Sigma^n; \mathcal{D}^{\otimes n})} = \Expect{x\sim\mathcal{D}^{\otimes n}}{f(x)\overline{f'(x)}}.
\]
Often times, when the measure $\mathcal{D}^{\otimes n}$ is clear from context, we will omit the $L_2(\Sigma^n; \mathcal{D}^{\otimes n})$
subscript and denote the inner product between $f$ and $f'$ by $\inner{f}{f'}$.
\subsection{The Degree Decomposition and the Efron-Stein Decomposition}
Given an inner product space, one may associate with it orthogonal decompositions of $L_2(\Sigma^n; \mathcal{D}^{\otimes n})$.
In this section we present two such decompositions, the degree decomposition and its refinement the Efron-Stein decomposition.
We will only present the basic notions and facts we need about them, and refer the reader to~\cite{ODonnell} to a more comprehensive
treatment.

\subsubsection{Juntas, Degrees and the Degree Decomposition}
To define the notion of degrees, it is most convenient to start with the notion of juntas, which are functions that depend only on
few of their input coordinates.
\begin{definition}
  For $D\subseteq[n]$, a function $f\colon \Sigma^n\to\mathbb{C}$ is called a $D$-junta if there exists
  $f'\colon \Sigma^{D}\to\mathbb{C}$ such that $f(x) = f'(x_D)$ for all $x\in\Sigma^n$.

  For an integer $0\leq d\leq n$, a function $f\colon \Sigma^n\to\mathbb{C}$ is called a $d$-junta if there
  exists a set $D\subseteq [n]$ of size $d$ such that $f$ is a $D$-junta.
\end{definition}

Equipped with the notion of juntas, we may define the degree decomposition in the following way:
\begin{definition}
    For an inner product space $L_2(\Sigma^n; \mathcal{D}^{\otimes n})$ as above and an integer $0\leq d\leq n$, we define
    the space $V_{\leq d}(\Sigma^n; \mathcal{D}^{\otimes n})\subseteq L_2(\Sigma^n; \mathcal{D}^{\otimes n})$ as the space spanned
    by all $d$-junta. We also define
    \[
    V_{=d}(\Sigma^n; \mathcal{D}^{\otimes n}) = V_{\leq d}(\Sigma^n; \mathcal{D}^{\otimes n})\cap V_{\leq d-1}(\Sigma^n; \mathcal{D}^{\otimes n})^{\perp}.
    \]
\end{definition}
It is clear by definition that the spaces $V_{=d}(\Sigma^n; \mathcal{D}^{\otimes n})$ are mutually orthogonal and
\[
L_2(\Sigma^n; \mathcal{D}^{\otimes n}) = \bigoplus_{d=0}^{n}V_{=d}(\Sigma^n; \mathcal{D}^{\otimes n}),
\]
so any function $f\colon\Sigma^n\to\mathbb{C}$ can be uniquely written as $f(x) = \sum\limits_{d=0}^{n} f^{=d}(x)$ where $f^{=d}\in V_{=d}(\Sigma^n; \mathcal{D}^{\otimes n})$
is called the degree $d$ component of $f$. With these notations, Plancherel's equality states that for any pair of functions $f,g\colon (\Sigma^n,\mathcal{D}^{\otimes n})\to\mathbb{C}$ one has that
\[
\inner{f}{g} = \sum\limits_{d=0}^{n}\inner{f^{=d}}{g^{=d}}.
\]
Parseval's equality is the specialized statement where $f=g$, in which case one get that $\norm{f}_2^2 = \sum\limits_{d=0}^{n}\norm{f^{=d}}_2^2$.
We often refer to the function $f^{\leq d}(x) = \sum\limits_{i=0}^{d} f^{=i}(x)$ as the part of $f$ of level at most $d$, and
refer to the quantity $\norm{f^{\leq d}}_2^2$ as the level $d$ weight of $f$:
\begin{definition}
  The weight of $f$ on level up to $d$ is defined as $W_{\leq d}[f] = \norm{f^{\leq d}}_2^2 = \sum\limits_{i=0}^{d}\norm{f^{=i}}_2^2$.
\end{definition}

\subsubsection{The Efron-Stein Decomposition}
The Efron-Stein decomposition is a refinement of the degree decomposition, which we will make use of a handful of times.
\begin{definition}
For an integer $0\leq d\leq n$ and $S\subseteq [n]$ of size $d$, we define
\[
V_{=S}(\Sigma^n; \mathcal{D}^{\otimes n}) = V_{=d}(\Sigma^n; \mathcal{D}^{\otimes n})\cap \spn\left(\sett{f\colon \Sigma^n\to\mathbb{C}}{f\text{ is an $S$-junta}}\right).
\]
\end{definition}
It can be shown that the spaces $V_{=S}(\Sigma^n; \mathcal{D}^{\otimes n})$ are mutually orthogonal, thus
\[
V_{=d}(\Sigma^n; \mathcal{D}^{\otimes n}) = \bigoplus_{\card{S} = d}V_{=S}(\Sigma^n; \mathcal{D}^{\otimes n}).
\]
In particular, given any function $f\colon\Sigma^n\to\mathbb{C}$ and $0\leq d\leq n$, we may further decompose the degree $d$ component
of $f$, namely $f^{=d}$, and uniquely write it as $f^{=d}(x) = \sum\limits_{\card{S} = d} f^{=S}(x)$ where $f^{=S} \in V_{=S}(\Sigma^n; \mathcal{D}^{\otimes n})$.
Thus, we get the Efron-Stein decomposition of $f$:
$f(x) = \sum\limits_{S\subseteq [n]}{f^{=S}(x)}$, where $f^{=S}\in V_{=S}(\Sigma^n; \mathcal{D}^{\otimes n})$.
Once again, with these notations Plancherel's equality states that for any pair of functions $f,g\colon (\Sigma^n,\mathcal{D}^{\otimes n})\to\mathbb{C}$, one has that
\[
\inner{f}{g} = \sum\limits_{S\subseteq [n]}\inner{f^{=S}}{g^{=S}},
\]
and Parseval's equality is the specialized statement where $f=g$, in which case we get $\norm{f}_2^2 = \sum\limits_{S\subseteq [n]}\norm{f^{=S}}_2^2$.

\subsection{Random Restrictions}\label{sec:random_restrictions}
In this section we define the notions of restrictions and random restrictions of functions, which are used in this paper extensively.
We use two types of random restrictions. The first type is very common in the area
of analysis of Boolean functions; one selects a set of coordinates randomly, fixes them according to the marginal distribution
there and thinks of the rest of the coordinates as variables. The underlying measure of the input space stays the same.
The second type is a much less common type of restrictions, and it is crucial for our arguments. In this second type, we still chooses a random set of variables and a fixing
for them, but not according to these the marginal distribution on this coordinates. Rather, the fixing for this set of variables is chosen according to a different
measure, and to balance this out the underlying measure of the rest of the coordinates changes. Below is a more formal description.

\subsubsection{Restrictions that Preserve the Underlying Measure}
For a finite alphabet $\Sigma$ and a probability measure $\mu$ over it,
a function $f\colon (\Sigma^n, \mu^{\otimes n})\to\mathbb{C}$, a set of coordinates $I\subseteq [n]$ and a partial input
$z\in \Sigma^I$, the restricted function $f_{I\rightarrow z}\colon \Sigma^{[n]\setminus I}\to\mathbb{R}$ is defined as
\[
f_{I\rightarrow z}(y) = f(x_I = z, x_{\bar{I}} = y).
\]
A random restriction of a function $f\colon (\Sigma^n, \mu^{\otimes n})\to\mathbb{C}$ refers to a restriction in which either
(or both) $I$ and $z$ are chosen randomly. Typically, random restrictions are associated with a parameter $\rho\in (0,1)$:
we first choose $I\subseteq[n]$ by including each element $i\in [n]$ independently with probability $\rho$, then choose
$z\sim \mu^{\overline{I}}$ and then consider the function $f_{\overline{I}\rightarrow z}$ as a function from $(\Sigma^{I}, \mu^{I})$
to $\mathbb{C}$. we often denote by $I\subset_{\rho} [n]$ the distribution of $I$ which is sampled in such a way.

\subsubsection{Restrictions that Do Not Preserve the Underlying Measure}
An important utility of restrictions for us will be that they allow us to change the underlying measure of our probability
space. Suppose that the measure $\mu$ can be written as $\mu = \rho\mathcal{D}_1 + (1-\rho)\mathcal{D}_2$, where $\mathcal{D}_1$ and
$\mathcal{D}_2$ are distributions and $\rho\in (0,1)$. In such situations we will often consider the following random restriction process:
choose $I\subseteq_{\rho} [n]$, choose $z\sim \mathcal{D}_2^{\overline{I}}$, and consider the function
$f_{\overline{I}\rightarrow z}$ as a function from $(\Sigma^{I}, \mathcal{D}_1^{I})$
to $\mathbb{C}$. Note that under these random choices, choosing $y\sim \mathcal{D}_1^{I}$,
the distribution of the point $(x_I = y, x_{\bar{I}} = z)$ is still $\mu$, hence this restriction process makes sense. In particular, the
expected average, as well as the expected $2$-norm squared of $f_{\overline{I}\rightarrow z}$ over the choice of $z$ are the average
and the $2$-norm squared of $f$.

Such random restrictions are used extensively in the paper. An example case where this can be useful is the case that in
the distribution $\mu$ the probability of each atom is at least $\alpha$, and we wish to switch from it to the uniform
distribution over $\Sigma$. In that case, we may write $\mu = \frac{\alpha}{2}U + \left(1-\frac{\alpha}{2}\right)\mu'$ where
$U$ is the uniform distribution over $\Sigma$ and $\mu'$ is some distribution. Following the above procedure for random
restrictions, we may thus change the underlying measure of our space from $\mu$ to $U$ by approximately fixing $1-\frac{\alpha}{2}$
randomly chosen fraction of the coordinates according to $\mu'$.

\subsection{Markov Chains}
Given a probability space $(\Sigma,\mu)$, we will often consider Markov chains over $\Sigma$ that have $\mu$ as a stationary distribution.
We often denote these Markov Chain by $\mathrm{T}$, and abusing notations we will also think of $\mathrm{T}$ as an averaging operator from
$L_2(\Sigma,\mu)$ to $L_2(\Sigma,\mu)$ defined as
\[
\mathrm{T} f(x) = \Expect{y\sim \mathrm{T}x}{f(y)}.
\]
We say a Markov chain $\mathrm{T}$ is connected if the graph, whose vertices are $\Sigma$ and the edges are $(a,b)$ if there is
a transition from $a$ to $b$ in $\mathrm{T}$, is connected. We need a few well known basic properties of Markov chains that we record
below.
\begin{fact}\label{fact:MC_eval}
If $\mathrm{T}$ is connected and the probability of each atom is at least $\alpha$, then $\lambda_2(\mathrm{T})\leq 1-\Omega(\alpha^2)$.
\end{fact}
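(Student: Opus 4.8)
The plan is to reduce the claim to a statement about the conductance (edge expansion) of the weighted graph underlying $\mathrm{T}$, and then invoke a standard Cheeger-type inequality, or alternatively give a direct variational argument. First I would recall that since $\mathrm{T}$ has $\mu$ as a stationary distribution, $\mathrm{T}$ is a self-adjoint (or at least a real-spectrum) operator on $L_2(\Sigma,\mu)$ after the usual symmetrization; its top eigenvalue is $1$, attained by the constant function, and $\lambda_2(\mathrm{T})$ is the largest eigenvalue of $\mathrm{T}$ restricted to the orthogonal complement of the constants. By the variational characterization, $1 - \lambda_2(\mathrm{T}) = \min \frac{\mathcal{E}(\varphi,\varphi)}{\mathrm{Var}_\mu(\varphi)}$, where the Dirichlet form is $\mathcal{E}(\varphi,\varphi) = \frac12 \sum_{a,b} \mu(a)\mathrm{T}(a,b)\,|\varphi(a)-\varphi(b)|^2$ and the minimum is over non-constant $\varphi$. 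So it suffices to lower bound this Rayleigh quotient by $\Omega(\alpha^2)$.

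The key structural input is that $\mathrm{T}$ is connected, so the graph on $\Sigma$ with edge set $\{(a,b) : \mathrm{T}(a,b) > 0\}$ is connected, and it has at most $|\Sigma| \le 1/\alpha$ vertices (since every atom has mass $\ge \alpha$). Hence between any two states $a,b$ there is a path of length at most $1/\alpha$ in this graph, and every edge $(c,d)$ traversed has $\mu(c)\mathrm{T}(c,d) \ge \alpha \cdot \alpha = \alpha^2$ (here I would note that connectedness of $\mathrm{T}$ means the transition probability $\mathrm{T}(c,d)$ along an existing edge is itself at least $\alpha$ — if the convention is instead merely $\mathrm{T}(c,d)>0$ one should still have $\mathrm{T}(c,d) \ge \alpha$ by the standing assumption that the probability of each atom in the relevant joint distribution is $\ge \alpha$; this is the one bookkeeping point to get right). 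Now take any non-constant $\varphi$ normalized so that $\mathrm{Var}_\mu(\varphi) = 1$. Writing $\mathrm{Var}_\mu(\varphi) = \frac12 \sum_{a,b}\mu(a)\mu(b)|\varphi(a)-\varphi(b)|^2$, for each pair $(a,b)$ fix a path $a = c_0, c_1, \ldots, c_\ell = b$ with $\ell \le 1/\alpha$ in the transition graph; by Cauchy--Schwarz $|\varphi(a)-\varphi(b)|^2 \le \ell \sum_{i=0}^{\ell-1}|\varphi(c_i)-\varphi(c_{i+1})|^2 \le \frac1\alpha \sum_{i}|\varphi(c_i)-\varphi(c_{i+1})|^2$. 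Summing over all pairs $(a,b)$ with weight $\mu(a)\mu(b)$ and rearranging, each directed edge $(c,d)$ of the transition graph collects a total weight of at most $\sum_{a,b}\mu(a)\mu(b) \cdot \frac1\alpha \cdot [\text{$(c,d)$ on path from $a$ to $b$}] \le \frac{1}{\alpha}$, so
\[
1 = \mathrm{Var}_\mu(\varphi) \le \frac{1}{\alpha}\sum_{(c,d)} |\varphi(c)-\varphi(d)|^2 \le \frac{1}{\alpha}\cdot \frac{1}{\alpha^2}\sum_{(c,d)} \mu(c)\mathrm{T}(c,d)|\varphi(c)-\varphi(d)|^2 = \frac{2}{\alpha^3}\mathcal{E}(\varphi,\varphi).
\]
This already gives $1 - \lambda_2(\mathrm{T}) \ge \alpha^3/2 = \Omega(\alpha^3)$; to sharpen the exponent to $\alpha^2$ as claimed, I would be a bit more careful in the path-counting step — using that the number of vertices is $\le 1/\alpha$ one can route the all-pairs flow so that each edge carries congestion $O(1/\alpha)$ rather than the crude $1/\alpha^2$ one might fear, and combined with the edge weight lower bound $\ge \alpha$ (one factor of $\alpha$, not $\alpha^2$, if one bounds $\mu(c)\mathrm{T}(c,d) \ge \alpha \cdot \alpha$ but charges the path length and the number of terminal pairs more tightly) the two surviving factors of $\alpha$ yield $\Omega(\alpha^2)$.

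The main obstacle I anticipate is precisely this last quantitative tightening: the naive Poincaré/canonical-path argument loses an extra factor of $\alpha$ compared to the stated $\Omega(\alpha^2)$, so one must either (i) argue via conductance and Cheeger's inequality $1-\lambda_2 \ge \Phi^2/2$ with $\Phi \ge \alpha^2$ (a cut separating $S$ from its complement has $\mu(S) \le 1-\alpha$ and crosses at least one transition edge of weight $\ge \alpha \cdot \alpha$, giving $\Phi = \min_S \frac{\sum_{c\in S, d\notin S}\mu(c)\mathrm{T}(c,d)}{\mu(S)} \ge \frac{\alpha^2}{1} $, hence $1-\lambda_2 \gtrsim \alpha^4$ — wait, this is worse, so Cheeger is the wrong tool here), or (ii) just accept a spectral gap that is $\mathrm{poly}(\alpha)$ and bounded away from $1$, which is all that is actually used downstream, and not worry about whether the exponent is $2$ or $3$. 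In the write-up I would use the canonical-path argument above to get the clean bound $1-\lambda_2(\mathrm{T}) = \Omega(\alpha^{O(1)})$ with an explicit small exponent, remarking that the precise exponent is immaterial for our applications; if the stated exponent $2$ is genuinely needed, the refinement is a routine but slightly fiddly optimization of the flow congestion that I would carry out by choosing, for each pair $(a,b)$, a shortest path and bounding the number of pairs whose shortest path uses a given edge by $(1/\alpha)^2$ while simultaneously using the length bound $1/\alpha$ — the two bounds cannot both be tight, and balancing them recovers $\alpha^2$.
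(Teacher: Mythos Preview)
The paper's proof is a one-line citation: apply Cheeger's inequality to the graph associated with $\mathrm{T}$ and refer to Mossel. You explicitly tried Cheeger and discarded it (``wait, this is worse, so Cheeger is the wrong tool here''), so there is a real disconnect between your plan and the paper's.

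The source of your exponent trouble is the reading of the hypothesis. In this paper's usage, ``the probability of each atom is at least $\alpha$'' refers to the atoms of the transition measure, i.e.\ $\mu(a)\mathrm{T}(a,b)\ge\alpha$ for every transition $(a,b)$ in the support (this is how all the Markov chains in the paper arise: as marginals of distributions whose atoms are bounded below). With that reading, any nontrivial cut $S$ is crossed by at least one transition (connectedness), that transition has weight $\ge\alpha$, and $\min(\mu(S),\mu(S^c))\le 1/2$, so the conductance satisfies $\Phi\ge 2\alpha$. Cheeger's inequality $1-\lambda_2\ge\Phi^2/2$ then gives $1-\lambda_2\ge 2\alpha^2$ immediately. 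Your computation of $\Phi\ge\alpha^2$ came from treating the edge weight as $\mu(c)\cdot\mathrm{T}(c,d)\ge\alpha\cdot\alpha$, i.e.\ two separate $\alpha$'s rather than a single bound on the joint atom; this is what pushed you to $\alpha^4$ via Cheeger and to $\alpha^3$ via canonical paths.

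Your canonical-paths approach is a perfectly valid alternative and, with the correct edge-weight bound $w(e)\ge\alpha$, also yields $\Omega(\alpha^2)$: path length $\ell\le|\Sigma|\le 1/\alpha$, congestion $\sum_{a,b}\mu(a)\mu(b)\mathbf{1}[e\in\gamma_{ab}]\le 1$, so $\kappa\le(1/\alpha)\cdot(1/\alpha)\cdot 1=1/\alpha^2$. The final paragraph of your plan, where you propose to ``balance'' path length against the number of terminal pairs to shave an $\alpha$, is not needed and is not quite right as stated; the extra $\alpha$ you were chasing was never actually lost once the hypothesis is read correctly.
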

\begin{proof}
  The proof is by an application of Cheeger's inequality on the graph associated with $\mathrm{T}$, and we refer the reader
  to~\cite{Mossel} for a formal proof.
\end{proof}
Given an averaging operator $\mathrm{T}$ acting on univariate functions, we often think of its $n$-fold tensor $\mathrm{T}^{\otimes n}$.
Again, we will think of $\mathrm{T}^{\otimes n}$ both as a Markov chain over $\Sigma^n$ (on which, on each coordinate the Markov chain $\mathrm{T}$ is applied
independently), as well as an averaging operator acting on $L_2(\Sigma^n,\mu^{\otimes n})$.
In the case that $\mu$ is a stationary distribution of $\mathrm{T}$, it is easily shown that the spaces $V_{=S}$ are invariant under
$\mathrm{T}^{\otimes n}$, and for each $g\in V_{=S}$ it holds that $\norm{\mathrm{T}^{\otimes n} g}_2\leq \lambda_2(\mathrm{T})^{\card{S}}\norm{g}_2$.

\skipi
The following lemma asserts that if $f,g\colon\Sigma^n\to\mathbb{C}$ are $1$-bounded and $\inner{f}{\mathrm{T}^{\otimes n}g}$ is
significant, then $f$ and $g$ must have significant mass of the low levels.
\begin{lemma}\label{lem:noticeable_to_lowdegwt}
  Suppose $(\Sigma,\mu)$ is a finite domain and $\mathrm{T}$ is a connected Markov chain with stationary distribution $\mu$, in
  which the probability of each atom is at least $\alpha$. Then for all $\eps>0$ there is $d = O_{\alpha}(\log(1/\eps))$ such that
  if $f,g\colon (\Sigma^n,\mu^{\otimes n})\to \mathbb{C}$ are $1$-bounded and $\card{\inner{f}{\mathrm{T}^{\otimes n}g}}\geq \eps$, then
  $W_{\leq d}[f]\geq \frac{\eps^2}{4}$.
\end{lemma}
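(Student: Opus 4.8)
The plan is to expand $\inner{f}{\mathrm{T}^{\otimes n}g}$ in the Efron--Stein (equivalently, degree) decomposition and exploit the contractive action of $\mathrm{T}^{\otimes n}$ on high-degree parts. By Plancherel,
\[
\inner{f}{\mathrm{T}^{\otimes n}g} = \sum_{S\subseteq[n]}\inner{f^{=S}}{\mathrm{T}^{\otimes n}g^{=S}},
\]
and since each $V_{=S}$ is invariant under $\mathrm{T}^{\otimes n}$ with $\norm{\mathrm{T}^{\otimes n}g^{=S}}_2\leq \lambda_2(\mathrm{T})^{\card S}\norm{g^{=S}}_2$, Cauchy--Schwarz on each term gives
\[
\card{\inner{f}{\mathrm{T}^{\otimes n}g}} \leq \sum_{S}\lambda_2(\mathrm{T})^{\card S}\norm{f^{=S}}_2\norm{g^{=S}}_2 = \sum_{i=0}^{n}\lambda_2(\mathrm{T})^{i}\sum_{\card S = i}\norm{f^{=S}}_2\norm{g^{=S}}_2.
\]
Applying Cauchy--Schwarz once more to the inner sum over $\card S = i$, and using that $f,g$ are $1$-bounded so that $\sum_{\card S=i}\norm{f^{=i}}_2^2 = \norm{f^{=i}}_2^2 \le \norm{f}_2^2\le 1$ and likewise for $g$, each level contributes at most $\lambda_2(\mathrm{T})^i\norm{f^{=i}}_2\norm{g^{=i}}_2 \le \lambda_2(\mathrm{T})^i$.

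Now split the sum at a threshold $d$. The high levels contribute at most $\sum_{i>d}\lambda_2(\mathrm{T})^i \le \frac{\lambda_2(\mathrm{T})^{d+1}}{1-\lambda_2(\mathrm{T})}$. By Fact~\ref{fact:MC_eval}, $\lambda_2(\mathrm{T})\le 1-\Omega(\alpha^2)$, so $1-\lambda_2(\mathrm{T}) = \Omega(\alpha^2)$ and $\lambda_2(\mathrm{T})^{d+1} \le e^{-\Omega(\alpha^2 d)}$; choosing $d = C\alpha^{-2}\log(1/\eps)$ for a suitable absolute constant $C$ makes this tail at most $\eps/2$. For the low levels, bound $\lambda_2(\mathrm{T})^i\le 1$ and apply Cauchy--Schwarz to get
\[
\sum_{i\le d}\lambda_2(\mathrm{T})^i\norm{f^{=i}}_2\norm{g^{=i}}_2 \le \Big(\sum_{i\le d}\norm{f^{=i}}_2^2\Big)^{1/2}\Big(\sum_{i\le d}\norm{g^{=i}}_2^2\Big)^{1/2} = W_{\leq d}[f]^{1/2}\,W_{\leq d}[g]^{1/2} \le W_{\leq d}[f]^{1/2},
\]
using $W_{\leq d}[g]\le \norm{g}_2^2\le 1$. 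Combining, $\eps \le \card{\inner{f}{\mathrm{T}^{\otimes n}g}} \le W_{\leq d}[f]^{1/2} + \eps/2$, hence $W_{\leq d}[f]^{1/2}\ge \eps/2$, i.e.\ $W_{\leq d}[f]\ge \eps^2/4$, as claimed. Note the symmetric argument also gives $W_{\leq d}[g]\ge\eps^2/4$.

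The argument is essentially a routine spectral-tail estimate, so I do not anticipate a genuine obstacle; the only point requiring slight care is the bookkeeping of Cauchy--Schwarz across the Efron--Stein pieces within a fixed level versus across levels, to be sure the $1$-boundedness of $f$ and $g$ is used correctly (it enters only through $\norm{f}_2,\norm{g}_2\le 1$). The stated dependence $d = O_\alpha(\log(1/\eps))$ matches the $\lambda_2$-gap from Fact~\ref{fact:MC_eval}, with the hidden constant scaling like $\alpha^{-2}$.
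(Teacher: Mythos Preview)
Your proof is correct and follows essentially the same approach as the paper: split the Efron--Stein expansion at level $d$, bound the low-degree part by $\sqrt{W_{\leq d}[f]}$ via Cauchy--Schwarz, and control the high-degree tail using $\lambda_2(\mathrm{T})\leq 1-\Omega(\alpha^2)$. The only cosmetic difference is that the paper bounds the tail by pulling out the common factor $(1-\Omega_\alpha(1))^d$ and then applying Cauchy--Schwarz once over all $\card{S}>d$ (getting $\leq \eps/2$ directly), whereas you sum the geometric series level-by-level; both give $d=O_\alpha(\log(1/\eps))$.
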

\begin{proof}
  Decomposing $f,g$ according to the Efron-Stein decomposition of $(\Sigma^n,\mu^{\otimes n})$ as $f = \sum\limits_{S} f^{=S}$
  and $g = \sum\limits_{S} g^{=S}$, we get that
  \[
  \inner{f}{\mathrm{T}^{\otimes n} g}
  =\sum\limits_{S,Q}\inner{f^{=S}}{\mathrm{T}^{\otimes n} g^{=Q}}
  =\sum\limits_{S}\inner{f^{=S}}{\mathrm{T}^{\otimes n} g^{=S}}
  =\sum\limits_{S\neq \emptyset}\inner{f^{=S}}{\mathrm{T}^{\otimes n} g^{=S}}.
  \]
  The contribution from $\card{S}\leq d$ is at most
  \begin{align*}
  \sum\limits_{\card{S}\leq d}\card{\inner{f^{=S}}{\mathrm{T}^{\otimes n} g^{=S}}}
  \leq
  \sum\limits_{\card{S}\leq d}\norm{f^{=S}}_2\norm{\mathrm{T}^{\otimes n} g^{=S}}_2
  \leq
  \sum\limits_{\card{S}\leq d}\norm{f^{=S}}_2\norm{g^{=S}}_2
  &\leq\sqrt{W_{\leq d}[f]W_{\leq d}[g]}\\
  &\leq \sqrt{W_{\leq d}[f]}.
  \end{align*}
  For $\card{S}>d$, we have by Fact~\ref{fact:MC_eval} that
  \[
  \norm{\mathrm{T}^{\otimes n} g^{=S}}_2
  \leq \lambda_2(\mathrm{T})^{\card{S}}\norm{g^{=S}}_2
  \leq (1-\Omega_{\alpha}(1))^d\norm{g^{=S}}_2
  \leq \frac{\eps}{2}\norm{g^{=S}}_2
  \]
  for $d$ chosen suitably as in the statement. Thus, using Cauchy-Schwarz the contribution from $\card{S} > d$ is at most
  \begin{align*}
  \sum\limits_{\card{S}>d}\card{\inner{f^{=S}}{\mathrm{T}^{\otimes n} g^{=S}}}
  \leq
  \sum\limits_{\card{S}>d}\norm{f^{=S}}_2\norm{\mathrm{T}^{\otimes n} g^{=S}}_2
  &\leq
  \frac{\eps}{2}\sum\limits_{\card{S}>d}\norm{f^{=S}}_2\norm{g^{=S}}_2\\
  &\leq \frac{\eps}{2}\sqrt{\sum\limits_{\card{S}>d}\norm{f^{=S}}_2^2}\sqrt{\sum\limits_{\card{S}>d}\norm{g^{=S}}_2^2},
  \end{align*}
  which is at most $\frac{\eps}{2}$ by Parseval. Combining, we get that $\sqrt{W_{\leq d}[f]}+\frac{\eps}{2}\geq \eps$, and the statement follows by re-arranging.
\end{proof}
\subsection{Some Markov Chain Lemmas}
In this section, we collect a few basic results regarding Markov chains that we will make use of repeatedly.
\subsubsection{The Eigenvalues of a Markov Chain}
The first statement gives a description of the eigenvalues of a Markov chain (both upper and
lower bounds) as a function of the minimum probability of each atom and the minimum probability that the Markov chain stays in the same state.
It will help us to analyze noise-type operators that are similar to the standard noise operator on product spaces (but are not quite the same).
\begin{lemma}\label{lem:mossel_MC}
  For all $\alpha>0$ and $m\in\mathbb{N}$ there are $c>0$ and $C>0$ such that the following holds.
  Let $\Sigma$ be an alphabet of size at most $m$, let $\mu$ be a distribution over $\Sigma$ in which the probability of each atom
  is at least $\alpha$, and let $\mathrm{T}$ be a Markov chain over $\Sigma$ in which $\mu$ is a stationary distribution. Let
  $k$ be the number of connected components in $\mathrm{T}$, and let $\lambda_1(\mathrm{T})\geq \ldots\lambda_{m}(\mathrm{T})$ be
  the eigenvalues of $\mathrm{T}$.
  \begin{enumerate}
    \item We have $\lambda_1(\mathrm{T}) = \ldots = \lambda_k(\mathrm{T}) = 1$.
    \item If the probability of each transition in $\mathrm{T}$ is at least $\xi$,
    then $\lambda_{i}(\mathrm{T})\leq 1-c\xi$ for all $i\geq k+1$.
    \item If for all $x\in\Sigma$ we have that $\Prob{x'\sim\mathrm{T} x}{x'=x}\geq 1-\xi$, then
    $\lambda_i(\mathrm{T})\geq 1-C\xi$.
  \end{enumerate}
\end{lemma}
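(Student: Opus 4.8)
The plan is to reduce all three parts to standard spectral facts about reversible-type Markov chains by first passing to a symmetric (self-adjoint) matrix and then applying, in turn, the Perron--Frobenius theorem, a Cheeger-type inequality, and a direct norm estimate. The point to keep in mind throughout is that $\mathrm{T}$ need not be reversible, but it does have a stationary distribution $\mu$ with all atoms $\geq \alpha$; this lets us conjugate by $\mathrm{diag}(\sqrt{\mu(x)})$ and work with $\widetilde{\mathrm{T}} = D^{1/2}\mathrm{T}D^{-1/2}$, which has the same spectrum as $\mathrm{T}$. If $\mathrm{T}$ happens not to be reversible, I would instead work with the additive symmetrization $\mathrm{S} = \tfrac12(\widetilde{\mathrm{T}} + \widetilde{\mathrm{T}}^{*})$ and use that for any real eigenvalue $\lambda$ of $\widetilde{\mathrm{T}}$ with unit eigenvector $v$ one has $\lambda = \langle \widetilde{\mathrm{T}}v, v\rangle = \langle \mathrm{S}v,v\rangle$, so that $\lambda_i(\mathrm{T}) \in [\lambda_{\min}(\mathrm{S}), \lambda_{\max}(\mathrm{S})]$; this is enough for the one-sided bounds in parts (2) and (3), and part (1) is about the eigenvalue $1$ which is handled separately.

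\textbf{Part (1).} First I would observe that $\mathrm{T}$ acts on each connected component invariantly, so $\mathrm{T}$ is block-diagonal with $k$ blocks, each of which is a stochastic matrix of a connected (hence irreducible) chain with a full-support stationary distribution. By Perron--Frobenius applied to each block, each block has $1$ as a simple eigenvalue and all other eigenvalues of modulus at most $1$; summing over blocks gives exactly $k$ eigenvalues equal to $1$, i.e.\ $\lambda_1 = \cdots = \lambda_k = 1 > \lambda_{k+1}$. (One should note the stationary distribution on each block is the restriction of $\mu$ suitably renormalized, which is why full support of $\mu$ matters.)

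\textbf{Part (2).} Here I would invoke a Cheeger-type / spectral-gap bound. On each connected component, the chain is irreducible with minimum transition probability $\geq \xi$ and stationary weights $\geq \alpha$; the conductance of such a component is bounded below by something like $\Omega(\alpha \xi)$ — actually, since every present transition has probability $\geq \xi$ and every vertex carries mass $\geq \alpha$, one gets $\Phi \geq \Omega(\xi)$ after normalizing by the stationary mass of the smaller side (the constant may absorb a factor depending on $m$ and $\alpha$). Then Cheeger's inequality for Markov chains (in the form $\lambda_2 \leq 1 - \Phi^2/2$, or its non-reversible analogue applied to $\mathrm{S}$ as above) gives $\lambda_{k+1}(\mathrm{T}) \leq 1 - c\xi$ for a constant $c = c(\alpha,m) > 0$; one could also cite Fact~\ref{fact:MC_eval} essentially verbatim, tracking that the gap is $\Omega(\xi)$ rather than $\Omega(\alpha^2)$ when the transition lower bound $\xi$ is the binding quantity. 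I expect this to be the main obstacle, since for a genuinely non-reversible chain one must be slightly careful that the relevant quantity controlling correlation decay is $\lambda_2(\mathrm{S})$ and that this is what the statement's $\lambda_i(\mathrm{T})$ effectively refers to in later applications; the clean way around it is the symmetrization remark above, which reduces everything to the reversible Cheeger inequality.

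\textbf{Part (3).} This is the easiest: write $\mathrm{T} = (1-\xi')\mathrm{I} + \xi' \mathrm{R}$ where $\xi' = \max_x \Pr_{x'\sim \mathrm{T}x}[x'\neq x] \leq \xi$ and $\mathrm{R}$ is a stochastic matrix (this decomposition is valid precisely because each diagonal entry of $\mathrm{T}$ is $\geq 1-\xi$). Every eigenvalue of $\mathrm{R}$ has modulus at most $1$, and hence — passing to $\mathrm{S}$ again if needed so that we are comparing real numbers, where $\mathrm{S} = (1-\xi')\mathrm{I} + \xi'\,\tfrac12(\widetilde{\mathrm{R}}+\widetilde{\mathrm{R}}^{*})$ and the symmetrized part of $\widetilde{\mathrm{R}}$ has all eigenvalues in $[-1,1]$ — every eigenvalue of $\mathrm{T}$ satisfies $\lambda_i(\mathrm{T}) \geq (1-\xi') - \xi' = 1 - 2\xi' \geq 1 - 2\xi$, so the claim holds with $C = 2$ (or a slightly larger constant to absorb the conjugation). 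Assembling the three parts completes the proof.
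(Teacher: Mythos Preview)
Your plan for Parts (1) and (3) is sound and close to the paper's. The paper proves (1) by exhibiting the $k$ component indicators as independent eigenvectors with eigenvalue $1$, and proves (3) via the Dirichlet form $1-\lambda = \tfrac12\,\mathbb{E}_{z,z'}|g(z)-g(z')|^2 \le \tfrac12 \Pr[z\neq z']\cdot 4\|g\|_\infty^2 \le (2/\alpha)\xi$; your decomposition $\mathrm{T}=(1-\xi')I+\xi'\mathrm{R}$ is an equally clean alternative. Incidentally, every chain to which this lemma is applied in the paper is reversible, so your symmetrization detour is unnecessary here, though harmless.

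The genuine gap is in Part (2). Cheeger's inequality $1-\lambda_2 \ge \Phi^2/2$ only yields $\Omega(\xi^2)$, not $\Omega(\xi)$: the conductance of any cut in a connected component is $\Phi = \Theta_{\alpha,m}(\xi)$ (at least one crossing edge of weight $\ge \alpha\xi$), so $\Phi^2/2 = \Theta(\xi^2)$. The two-state lazy chain with off-diagonal entries $\xi$ already has $\Phi=\xi$ and $1-\lambda_2=2\xi$, so the quadratic loss in Cheeger is real, and the linear-in-$\xi$ bound is actually used downstream (e.g.\ in Lemma~\ref{lem:op_comparison_lemma}, where one needs the upper bound $1-s\beta\xi$ from Part (2) to beat the lower bound $1-cC\beta\xi$ from Part (3) by choice of $c$). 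The paper avoids Cheeger by a direct Poincar\'e-type argument exploiting $|\Sigma|\le m$: for a unit eigenvector $g$ orthogonal to the component indicators, one walks along a path of length $\le m$ inside some component (using that $g$ has mean zero there) to find \emph{adjacent} vertices with $|g(x_i)-g(x_{i+1})|\ge 1/(m^{3/2}\sqrt 2)$, and then $1-\lambda = \tfrac12\,\mathbb{E}|g(z)-g(z')|^2 \ge \tfrac{\xi}{2}|g(x_i)-g(x_{i+1})|^2 \ge \xi/(4m^3)$. A canonical-paths bound would work equally well; plain Cheeger does not.
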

\begin{proof}
  For the first item, write $\Sigma = \Sigma_1\cup\ldots\cup \Sigma_k$ where $\Sigma_1,\ldots,\Sigma_k$ are the connected components
  of $\mathrm{T}$, and take $g_i$ which is $1$ on $\Sigma_i$ and $0$ on the rest. Note that the $g_i$'s are linearly
  independent and all have eigenvalues $1$ in $\mathrm{T}$.

  For the second item, let $g$ be an eigenvector of $\mathrm{T}$ perpendicular to $g_1,\ldots,g_k$ with eigenvalue $\lambda$, and normalize it so that
  $\norm{g}_2^2 = 1$. Then there is $i$ such that
  \[
  \sum\limits_{x\in \Sigma_i}\mu(x)\card{g(x)}^2\geq \frac{1}{k},
  \]
  so either for $h = {\sf Re}(g)$ or $h = {\sf Im}(g)$ we have that
  $\sum\limits_{x\in \Sigma_i}\mu(x)h(x)^2\geq \frac{1}{2k}$. Therefore there must be $x\in\Sigma_i$
  such that either $h(x)\geq \sqrt{1/2k}$ or $h(x)\leq -\sqrt{1/2k}$; without loss of generality
  we assume the former. As $\inner{g}{g_i} = 0$ we get $\sum\limits_{x\in\Sigma_i}\mu(x) g(x) = 0$
  hence $\sum\limits_{x\in\Sigma_i}\mu(x) h(x) = 0$, and in particular there is $x'\in\Sigma_i$ such that $h(x)\leq 0$.
  Summarizing, $\card{h(x) - h(x')}\geq 1/\sqrt{2k}$. As $\Sigma_i$ is a connected component of $\mathrm{T}$, there is a path $x=x_0\rightarrow\ldots\rightarrow x_{\ell} = x'$
  between $x$ and $x'$, and so
  \[
  \frac{1}{\sqrt{2k}}\leq \card{h(x) - h(x')}\leq \sum\limits_{i=0}^{\ell-1}\card{h(x_i) - h(x_{i+1})},
  \]
  so there is $i$ such that $\card{h(x_i) - h(x_{i+1})}\geq \frac{1}{\sqrt{2k}\ell}\geq \frac{1}{m^{3/2}\sqrt{2}}$.
  Therefore $\card{g(x_i) - g(x_{i+1})}\geq \frac{1}{m^{3/2}\sqrt{2}}$, and it follows that
  \[
  1-\lambda
  =1-\inner{g}{\mathrm{T}g}
  = \frac{1}{2}\Expect{\substack{z\sim \mu\\ z'\sim \mathrm{T}z}}{\card{g(z)-g(z')}^2}
  \geq \frac{\xi}{2}\card{g(x_i) - g(x_{i+1})}^2
  \geq \frac{\xi}{4m^3},
  \]
  finishing the proof of the second item.

  For the third item, taking $g$ as before we have that
  \[
  1-\lambda
  =1-\inner{g}{\mathrm{T}g}
  = \frac{1}{2}\Expect{\substack{z\sim \mu\\ z'\sim \mathrm{T}z}}{\card{g(z)-g(z')}^2}
  \leq \frac{1}{2}\Prob{\substack{z\sim\mu\\ z'\sim\mathrm{T}z}}{z\neq z'} 4\norm{g}_{\infty}^2,
  \]
  which is at most $\xi \frac{2}{\alpha}\norm{g}_2^2 \leq \frac{2}{\alpha} \xi$,
  and the third item is proved.
\end{proof}

\subsubsection{Markov Chains and Random Restrictions}
We will often measure various notion of degrees via Markov Chains based notions (as opposed to precise degrees), so as to preserve
boundedness of functions. As such, we often want to make assertions of the form: ``if a function $f$ has high degree, then a random restriction of $f$ also
has high degree'' (where again, the notion of degree is not necessarily the standard notion). Such statements are quite straightforward when dealing
with the standard notion of degree, but less so with Markov Chain based notions. For our purposes, the following lemma will play the role of such statement for our softer notion of high-degreeness.

In the statement below one should think of the quantity $\inner{f}{\mathrm{T}_{1-c\beta\xi, \mu^{\otimes n}}f}$ as small, and of the fact that it is small
as saying that if we write $f$ in basis of eigenvectors of the operator $\mathrm{T}_{1-c\beta\xi, \mu^{\otimes n}}$, then most of the $L_2$-mass of $f$
will lie on monomials involving many eigenvectors whose eigenvalue is not $1$; the number of such eigenvectors will need to $\Omega\left(\frac{1}{\beta\xi}\right)$.
Morally, the lemma says that after a random restriction leaving $\beta$ fraction of the coordinates alive, the degree with respect to such eigenvectors is
at least $\Omega\left(\frac{1}{\xi}\right)$, but some care is needed as the operator with respect to which we measure degree, changes.

\begin{lemma}\label{lem:op_comparison_lemma}
  For all $\alpha,\beta>0$ there is $c>0$ such that the following holds.
  Let $\Sigma$ be an alphabet of size at most $m$, let $\mu, \nu_1, \nu_2$ be distributions over $\Sigma$ in which the probability of
  each atom is at least $\alpha$ and $\mu = \beta \nu_1 + (1-\beta)\nu_2$.  Let $G = (\Sigma, E)$ be a graph on $\Sigma$, and consider
  the Markov chains $\mathrm{T}_{1-\xi, \mu}$ and $\mathrm{T}_{1-\xi, \nu_1}$ defined as: for
  $\mathrm{T}_{1-\xi, \mu}$, on $x\in \Sigma$, a sample $x'\sim \mathrm{T}_{1-\xi, \mu}x$ is generated
  by taking $x'=x$ with probability $1-\xi$, and otherwise take $x'\sim \mu$ conditioned on $x'$ being a neighbour
  of $x$ in $G$. We similarly define $\mathrm{T}_{1-\xi, \nu_1}$. Then for all $f\colon\Sigma^n\to\mathbb{C}$,
  \[
  \Expect{\substack{I\subseteq_{\beta} [n]\\ z\sim \nu_2^{\overline{I}}}}
  {\inner{f_{\overline{I}\rightarrow z}}{\mathrm{T}_{1-\xi, \nu_1^I}f_{\overline{I}\rightarrow z}}_{\nu_1^{I}}}
  \leq \inner{f}{\mathrm{T}_{1-c\beta\xi, \mu^{\otimes n}}f}.
  \]
\end{lemma}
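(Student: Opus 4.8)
The plan is to reduce the inequality to a coordinate-by-coordinate comparison of the two noise operators via the tensorization structure, and then to an eigenvalue comparison on a single coordinate using Lemma~\ref{lem:mossel_MC}. First I would set up notation: let $\mathrm{S} = \mathrm{T}_{1-\xi,\nu_1}$ denote the single-coordinate operator on $L_2(\Sigma,\nu_1)$ and $\mathrm{R} = \mathrm{T}_{1-c\beta\xi,\mu}$ the single-coordinate operator on $L_2(\Sigma,\mu)$, so that the right-hand side is $\inner{f}{\mathrm{R}^{\otimes n}f}_{\mu^{\otimes n}}$. The key observation is that the left-hand side, as a function of $f$, is itself an averaging operator on $L_2(\Sigma^n,\mu^{\otimes n})$: indeed, the map
\[
f \longmapsto \mathrm{Q} f := \Expect{\substack{I\subseteq_{\beta}[n]\\ z\sim\nu_2^{\overline I}}}{\big(\mathrm{S}^{I} f_{\overline I\to z}\big)\big|_{\text{viewed on }\Sigma^n}}
\]
corresponds to the Markov chain on $\Sigma^n$ that, independently on each coordinate $i$, with probability $\beta$ applies one step of $\mathrm{S}$ (i.e.\ keeps $x_i$ w.p.\ $1-\xi$, else resamples a $G$-neighbour from $\nu_1$) and with probability $1-\beta$ resamples $x_i$ from $\nu_2$; since $\mu=\beta\nu_1+(1-\beta)\nu_2$, this chain has $\mu^{\otimes n}$ as stationary distribution and is a genuine tensor power $\mathrm{Q} = \mathrm{P}^{\otimes n}$ of a single-coordinate chain $\mathrm{P}$ on $\Sigma$. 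I would verify that $\inner{f}{\mathrm{Q}f}_{\mu^{\otimes n}}$ equals the left-hand side by expanding the expectation over $(I,z)$ and the independent resampling and matching it against the definition of $\mathrm{P}^{\otimes n}$.

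Having identified both sides as $\inner{f}{\mathrm{P}^{\otimes n}f}$ and $\inner{f}{\mathrm{R}^{\otimes n}f}$ for single-coordinate chains $\mathrm{P},\mathrm{R}$ on $(\Sigma,\mu)$, the next step is to compare these two operators. Both are self-adjoint on $L_2(\Sigma,\mu)$ (they are reversible: $\mathrm{P}$ because each of the mixed pieces is reversible w.r.t.\ the appropriate measure and the mixture is reversible w.r.t.\ $\mu$; $\mathrm{R}$ likewise), so $\mathrm{P}^{\otimes n}$ and $\mathrm{R}^{\otimes n}$ are self-adjoint and it suffices to compare them in the Loewner order, i.e.\ to show $\mathrm{P} \preceq \mathrm{R}$ as operators on $L_2(\Sigma,\mu)$, whence $\mathrm{P}^{\otimes n}\preceq \mathrm{R}^{\otimes n}$ by taking tensor powers of PSD operators, which gives the claimed inequality $\inner{f}{\mathrm{P}^{\otimes n}f}\le\inner{f}{\mathrm{R}^{\otimes n}f}$ for every $f$. (I should be a little careful that $\mathrm{R}^{\otimes n}-\mathrm{P}^{\otimes n}\succeq 0$: since $0\preceq \mathrm{P}\preceq\mathrm{R}$ and all operators involved are PSD and self-adjoint, one can write $\mathrm{R}^{\otimes n}-\mathrm{P}^{\otimes n}$ as a telescoping sum $\sum_i \mathrm{R}^{\otimes(i-1)}\otimes(\mathrm{R}-\mathrm{P})\otimes\mathrm{P}^{\otimes(n-i)}$ of PSD operators, using that a tensor product of PSD operators is PSD.) To prove $\mathrm{P}\preceq\mathrm{R}$, I would use the Dirichlet-form description: $\inner{g}{(I-\mathrm{P})g}_\mu = \tfrac12\Expect{z\sim\mu, z'\sim\mathrm{P}z}{|g(z)-g(z')|^2}$, and the same for $\mathrm{R}$. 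Now $\mathrm{P}$ only ever moves along edges of $G$ (the $\nu_2$-resampling piece contributes self-loops plus $G$-moves; actually the $1-\beta$ piece resamples from $\nu_2$ unrestricted — here I need $\nu_2$ to be supported so that this still corresponds to allowed moves; if the comparison graph for $\mathrm{R}$ is the complete graph this is automatic, and indeed $\mathrm{T}_{1-c\beta\xi,\mu}$ as defined resamples from $\mu$ conditioned on being a $G$-neighbour, so I should take the clean route where $\mathrm{R}$'s move graph contains $\mathrm{P}$'s move graph), and the per-edge transition rate of $\mathrm{P}$ is $\Theta(\beta\xi)$ times a bounded factor while that of $\mathrm{R}$ is $\Theta(c\beta\xi)$; choosing $c$ a small enough constant depending on $\alpha,\beta,m$ makes each edge-weight of the Dirichlet form of $I-\mathrm{R}$ dominate the corresponding edge-weight of $I-\mathrm{P}$, so $\inner{g}{(I-\mathrm{R})g}\le\inner{g}{(I-\mathrm{P})g}$, i.e.\ $\mathrm{P}\preceq\mathrm{R}$.

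The main obstacle, and the place where I expect the bookkeeping to be delicate, is making the single-coordinate comparison $\mathrm{P}\preceq\mathrm{R}$ precise: one has to chase through exactly what the mixture chain $\mathrm{P}$ does (it is \emph{not} simply ``$\mathrm{T}_{1-\beta\xi}$'' because the $\nu_2$-part does not restrict to $G$-neighbours and uses a different measure), express both $\mathrm{P}$ and $\mathrm{R}$ as reversible chains with the same stationary measure $\mu$, and compare the associated quadratic forms edge by edge — keeping track of the $\alpha$-lower bounds on atom probabilities and the alphabet size $m$ so that the resampling-from-$\nu_1$ versus resampling-from-$\mu$ discrepancy is absorbed into the constant $c$. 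An alternative, slightly cleaner route that I might take instead is to bypass the Loewner order and argue directly on the Efron–Stein/eigenbasis: decompose $f=\sum_S f^{=S}$, observe that $\mathrm{R}^{\otimes n}$ acts on $V_{=S}$ as a contraction with factor at most $\lambda_2(\mathrm{R})^{|S|}$ (and acts as identity only when $S=\emptyset$), and that the left-hand side operator $\mathrm{P}^{\otimes n}$ is a convex combination over the choice of $I$ of operators $\mathrm{S}^{\otimes I}\otimes(\text{pure averaging on }\overline I)$ whose quadratic form on $f$ is controlled by Lemma~\ref{lem:mossel_MC} (item 2 for the $G$-moves, giving $\lambda_2(\mathrm{S})\le 1-\Omega(\xi)$, hence after accounting for the probability $\beta$ that a coordinate is ``alive and moved,'' an effective spectral gap of $\Omega(\beta\xi)$ per coordinate) — then matching $\lambda_2(\mathrm{R})\le 1-c\beta\xi$ from item 3 of the same lemma closes the estimate. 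Either way the crux is a careful eigenvalue/Dirichlet-form comparison on a single coordinate, and the tensorization is then formal.
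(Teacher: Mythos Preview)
Your overall plan—rewrite the left side as $\inner{f}{\mathrm{P}^{\otimes n}f}_{\mu^{\otimes n}}$ for a single-coordinate reversible chain $\mathrm{P}$ on $(\Sigma,\mu)$, compare $\mathrm{P}\preceq\mathrm{R}$ on one coordinate, and tensorize—is the paper's strategy as well. But you have misidentified $\mathrm{P}$, and that misidentification is exactly the source of your unresolved worry about non-$G$ moves. In the restriction on the left side, the $\overline I$-coordinates are \emph{frozen} at their $\nu_2$-sampled value in both copies of $f$; they are not independently resampled. The correct joint law of $(x_i,x_i')$ is: with probability $1-\beta$ take $x_i=x_i'\sim\nu_2$; with probability $\beta$ take $x_i\sim\nu_1$ and $x_i'\sim\mathrm{S}x_i$. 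Your kernel $\mathrm{P}(x,\cdot)=\beta\,\mathrm{S}(x,\cdot)+(1-\beta)\nu_2(\cdot)$ does not have $\mu$ as stationary distribution, and $\inner{f}{\mathrm{P}f}_\mu$ is not the left-hand side (already on one coordinate the $(1-\beta)$-contributions differ: $\Expect{z\sim\nu_2}{\card{f(z)}^2}$ versus $\Expect{x\sim\mu}{f(x)}\cdot\overline{\Expect{y\sim\nu_2}{f(y)}}$). With the correct chain the $1-\beta$ piece contributes only self-loops, so every off-diagonal transition comes from $\mathrm{S}$ and lies on a $G$-edge; your Dirichlet-form comparison and telescoping Loewner tensorization then go through with $c$ a small power of $\alpha$.

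A second gap, in your alternative route: when $G$ is disconnected both $\mathrm{P}$ and $\mathrm{R}$ fix every function constant on connected components, so $\lambda_2(\mathrm{R})=1$ and the Efron--Stein bound $\lambda_2(\mathrm{R})^{\card{S}}$ is vacuous. The paper handles this by splitting $L_2(\Sigma,\mu)=V\oplus V'$ with $V$ the span of component-constant functions, bounding the eigenvalues on $V'$ via the two halves of Lemma~\ref{lem:mossel_MC} (item~2 upper-bounds those of $\mathrm{P}$, item~3 lower-bounds those of $\mathrm{R}$), and then tensorizing over the pieces $V^{\otimes S}\otimes V'^{\otimes[n]\setminus S}$. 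Your corrected Dirichlet-form route sidesteps this decomposition and is in that respect cleaner than the paper's argument.
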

\begin{proof}
  Expanding, the left hand side is equal to $\Expect{(x,x')\sim \mathcal{D}}{f(x)\overline{f(x')}}$,
  where the distribution $\mathcal{D}$ is defined as: for each $i$ independently, with probability $1-\beta$
  we take $x_i = x_i'$ sampled according to $\nu_2$, and otherwise we sample $x_i\sim \nu_1$ and then $x_i'\sim \mathrm{T}_{1-\xi, \nu} x_i$.
  Noting that marginally, $x$ and $x'$ are distributed according to $\mu$, we may view $\mathcal{D}$ as a reversible Markov chain
  $\mathrm{T}^{\otimes n}$ whose stationary distribution is $\mu$, so that the left hand side is equal to
  $\inner{f}{\mathrm{T}^{\otimes n} f}$. We show that it is at most the right hand side, and to do so we examine the eigenvalues and eigenvectors
  of these operators. We begin by remarking that the two operators are symmetric and positive semi-definite
  (as $\inner{f}{\mathrm{T}_{1-\xi, \mu^{\otimes n}}f}=\inner{\mathrm{T}_{\sqrt{1-\xi}, \mu^{\otimes n}} f}{\mathrm{T}_{\sqrt{1-\xi}, \mu^{\otimes n}}f}\geq 0$),
  hence these eigenvalues are non-negative.

  Let $k$ be the number of connected components in $G$, and note that if $g$ is a function which is constant on the connected components of $G$, then
  both operators act on it as the identity. We choose a basis for $L_2(\Sigma)$, say $g_1,\ldots,g_k,g_{k+1},\ldots,g_m$, where $g_1,\ldots, g_k$ are
  constant on all of the connected components of $G$ and $g_{k+1},\ldots,g_m$ are perpendicular to $g_1,\ldots,g_k$ with respect to $\inner{\cdot}{\cdot}_{\mu}$.
  Then we have that $\mathrm{T} g_i = \mathrm{T}_{1-\xi, \mu} g_i$ for $i=1,\ldots,k$. Also, the space $\spn(g_{k+1},\ldots,g_m)$ is invariant
  under both $\mathrm{T}$ and $\mathrm{T}_{1-\xi, \mu}$, and by Lemma~\ref{lem:mossel_MC} all eigenvalues of
  $\mathrm{T}$ are at most $1-s\beta\xi$ for some $s(m,\alpha)>0$ (as the probability of each transition is at least $\xi\beta$), and all
  eigenvalues of $\mathrm{T}_{1-c\xi\beta,\mu}$ are at least $1-cC\xi\beta > 1-s\beta\xi$.
  Thus, for every $g\in{\sf Span}(g_{k+1},\ldots,g_m)$ we have that $\inner{g}{\mathrm{T}g}\leq\inner{g}{\mathrm{T}_{1-c\xi\beta,\mu} g}$, and
  it follows that this inequality holds for every $g\colon \Sigma\to\mathbb{C}$.

  For multi-variate functions,
  writing $V = \spn(g_{1},\ldots,g_k)$ and $V' = \spn(g_{k+1},\ldots,g_m)$,
  we may decompose
  \[
  L_2(\Sigma;\mu^{\otimes n}) = \oplus_{S\subseteq [n]} \spn(V^{\otimes S}\otimes V'^{\otimes [n]\setminus S})
  \]
  and thus write any
  $g\colon \Sigma^n\to\mathbb{C}$ as $g = \sum\limits_S c_S g_S$ where $g_S\in \spn(V^{\otimes S} \otimes V'^{\otimes [n]\setminus S})$ has
  $2$-norm equal to $1$.

  \paragraph{A Computation for $\mathrm{T}$.}
  For $\mathrm{T}$ we now get that:
  \[
    \inner{g}{\mathrm{T}^{\otimes n} g}
    =
    \sum\limits_{S}\card{c_S}^2\inner{g_S}{\mathrm{T}^{\otimes n} g_S}
    \leq\sum\limits_{S}\card{c_S}^2\norm{g_S}_2\norm{\mathrm{T}^{\otimes n} g_S}
    \leq\sum\limits_{S}(1-s\beta\xi)^{n-\card{S}}\card{c_S}^2\norm{g_S}_2^2.
  \]
  In the last inequality, we used the fact that $\norm{\mathrm{T}^{\otimes n} g_S}\leq (1-s\beta\xi)^{n-\card{S}}c_S^2\norm{g_S}_2$,
  which may be observed as follows. Further decomposing $V$ and $V'$ into eigenspaces of $\mathrm{T}$, we may write
  $g_S = \sum\limits_{a\in A_S} c_a' v_a$ where $v_a\in V^{\otimes S} \otimes V'^{\otimes [n]\setminus S}$ are orthogonal unit vectors.
  Moreover, $v_a$ is an eigenvector of $\mathrm{T}$ with eigenvalue $0\leq \lambda_a\leq (1-s\beta\xi)^{2(n-\card{S})}$, and so
  \[
  \norm{\mathrm{T}^{\otimes n} g_S}_2^2
  =\sum\limits_{a\in A_S} \card{c_a'}^2 \lambda_a^2
  \leq \sum\limits_{a\in A_S} \card{c_a'}^2(1-s\beta\xi)^{2(n-\card{S})}
  =(1-s\beta\xi)^{2(n-\card{S})} \norm{g_S}_2^2.
  \]

  \paragraph{A Computation for $\mathrm{T}'$.}
  Similarly, for $\mathrm{T}_{1-\xi, \mu^{\otimes n}}$, we further decompose $V$ and $V'$ into eigenspaces of $\mathrm{T}_{1-\xi, \mu^{\otimes n}}$
  and write $g_S = \sum\limits_{a\in A_S} c_a' v_a$ where the $v_a\in V^{\otimes S}\otimes V'^{\otimes [n]\setminus S}$'s are orthogonal, have $2$-norm equal to
  $1$ and are each tensor of eigenvectors of $\mathrm{T}_{1-\xi, \mu}$. Thus, $\mathrm{T}_{1-\xi, \mu^{\otimes n}} v_a = \lambda_a v_a$ for $\lambda_a\geq (1-cC\xi\beta)^{n-\card{S}}$ and so
  \begin{align*}
    \inner{g}{\mathrm{T}_{1-\xi, \mu^{\otimes n}} g}
    =
    \sum\limits_{S}\card{c_S}^2\inner{g_S}{\mathrm{T}_{1-\xi, \mu^{\otimes n}} g_S}
    &=\sum\limits_{S,a\in A_S} \card{c_S}^2c_a^2\lambda_a \norm{v_a}_2^2\\
    &\geq \sum\limits_{S} \card{c_S}^2(1-cC\xi\beta)^{n-\card{S}} \sum\limits_{a\in A_S} \card{c_a'}^2,
  \end{align*}
  which is at least $\sum\limits_{S}(1-s\beta\xi)^{n-\card{S}}c_S^2\norm{g_S}_2^2\geq \inner{g}{\mathrm{T}^{\otimes n} g}$,
  as required.
\end{proof}

\subsubsection{Comparing Two Markov Chains}
The next lemma is tailored to handle the following case. Suppose that $\mathrm{T}$ and $\mathrm{T}'$ are two Markov chains on $\Sigma$
that both have $\mu$ as stationary distribution, and whose corresponding averaging operators $\mathrm{T},\mathrm{T}'\colon L_2(\Sigma,\mu)\to L_2(\Sigma,\mu)$
are both positive semi-definite. Intuitively, if the Markov chain $\mathrm{T}$ is richer than $\mathrm{T}$', then the averaging operator $\mathrm{T}$ does
more averaging than $\mathrm{T}'$. Hence, if we know that for some function $f\colon (\Sigma^n,\mu^{\otimes n})\to\mathbb{C}$ of $2$-norm equal to $1$ it holds
that $\norm{\mathrm{T}'^{\otimes n} f}_2$ is small, then $\norm{\mathrm{T}^{\otimes n} f}_2$ should also be small. More formally:
\begin{lemma}\label{lem:compare_averaging_ops}
  For all $m\in\mathbb{N}$ and $\alpha,\beta>0$ there is $s>0$ such that the following holds.
  Suppose that $\mu$ is a distribution over $\Sigma$ in which the probability of each atom is at least $\alpha$ and $\card{\Sigma}\leq m$,
  and let $\mathrm{T}, \mathrm{T}'$ be reversible Markov chains on $\Sigma$ with $\mu$ as stationary distribution.
  Further suppose that there is $\delta>0$ such that the following holds:
  \begin{enumerate}
    \item For all $x\in\Sigma$, $\Prob{x'\sim \mathrm{T}x}{x' = x}\geq 1-\delta$ and $\Prob{x'\sim \mathrm{T}'x}{x' = x}\geq 1-\delta$.
    \item For any distinct $x,y\in \Sigma$ such that $\mathrm{T}'(x,y)>0$ it holds that $\mathrm{T}(x,y)>0$.
    \item The probability of each transition in $\mathrm{T}$, $\mathrm{T}'$ is at least $\beta \delta$.
  \end{enumerate}
  Then for all $f\colon (\Sigma^n,\mu^{\otimes n})\to\mathbb{C}$ of $2$-norm at most $1$ it holds that
  \[
  \norm{\mathrm{T}^{\otimes n} f}_2
  \leq
  \norm{\mathrm{T}'^{\otimes n} f}_2^{s}.
  \]
\end{lemma}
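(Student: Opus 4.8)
The plan is to reduce the whole statement to a single‑coordinate comparison in the L\"owner (positive‑semidefinite) order, which then tensorizes. Throughout I would assume $\delta\leq\delta_0(m,\alpha)$ for a suitably small $\delta_0$: by Lemma~\ref{lem:mossel_MC} (item 3) together with assumption~(1), all eigenvalues of $\mathrm{T}$ and $\mathrm{T}'$ are then at least $1-C\delta\geq\tfrac12$ (where $C=C(m,\alpha)$), so $\mathrm{T},\mathrm{T}',\mathrm{T}'^2$ are positive semidefinite and their fractional powers are defined by the functional calculus --- and this is the only regime in which the lemma is applied (for large $\delta$ the claim can fail). I would then show that there is $s=s(m,\alpha,\beta)\in(0,1]$ with $\mathrm{T}^2\preceq\mathrm{T}'^{2s}$ on $L_2(\Sigma,\mu)$, writing $\mathrm{T}'^{2s}:=(\mathrm{T}'^2)^{s}$. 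Granting this, the conclusion follows by three routine steps: (i) the L\"owner order tensorizes, i.e.\ $0\preceq A\preceq B$ implies $A\otimes A\preceq B\otimes B$ since $B\otimes B-A\otimes A=(B-A)\otimes B+A\otimes(B-A)$ is a sum of tensor products of positive semidefinite operators, and iterating gives $A^{\otimes n}\preceq B^{\otimes n}$; applied to $A=\mathrm{T}^2$, $B=\mathrm{T}'^{2s}$ and using the identities $(\mathrm{T}^2)^{\otimes n}=(\mathrm{T}^{\otimes n})^2$ and $(\mathrm{T}'^{2s})^{\otimes n}=(\mathrm{T}'^{\otimes n})^{2s}$ (both clear in the tensor‑product eigenbasis) this yields $(\mathrm{T}^{\otimes n})^2\preceq(\mathrm{T}'^{\otimes n})^{2s}$; (ii) since $\mathrm{T}^{\otimes n}$ is self‑adjoint, $\norm{\mathrm{T}^{\otimes n}f}_2^2=\inner{f}{(\mathrm{T}^{\otimes n})^2 f}\leq\inner{f}{(\mathrm{T}'^{\otimes n})^{2s}f}$; and (iii) writing $\mathrm{T}'^{\otimes n}=\sum_j\nu_j P_j$ with $\nu_j\geq0$ and $w_j=\norm{P_j f}_2^2$, so $\sum_j w_j=\norm f_2^2\leq1$, concavity of $t\mapsto t^{s}$ gives $\inner{f}{(\mathrm{T}'^{\otimes n})^{2s}f}=\sum_j\nu_j^{2s}w_j\leq\big(\sum_j\nu_j^2 w_j\big)^{s}=\norm{\mathrm{T}'^{\otimes n}f}_2^{2s}$ (Jensen when $\sum_j w_j=1$, and in general after rescaling since $(\sum_j w_j)^{1-s}\leq1$). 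Combining, $\norm{\mathrm{T}^{\otimes n}f}_2\leq\norm{\mathrm{T}'^{\otimes n}f}_2^{s}$.

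\paragraph{Proving $\mathrm{T}^2\preceq\mathrm{T}'^{2s}$.}
For the single‑coordinate inequality I would exploit that $\mathrm{T}$ being ``richer'' shows up through the coarseness of its kernel. Let $V_1=\ker(I-\mathrm{T})$ and $V_1'=\ker(I-\mathrm{T}')$, the spans of the indicators of the connected components of the transition graphs of $\mathrm{T}$ and of $\mathrm{T}'$. By assumption~(2) the transition graph of $\mathrm{T}'$ is a subgraph of that of $\mathrm{T}$, so its components refine those of $\mathrm{T}$ and hence $V_1\subseteq V_1'$. By Lemma~\ref{lem:mossel_MC} (items 1--2, with $\xi=\beta\delta$ via assumption~(3)) every eigenvalue of $\mathrm{T}$ on $V_1^{\perp}$ is at most $1-c\beta\delta$ for some $c=c(m,\alpha)>0$; combined with the lower bound $1-C\delta\geq0$, these eigenvalues lie in $[0,1-c\beta\delta]$, so $\norm{\mathrm{T}g}_2\leq(1-c\beta\delta)\norm g_2$ for $g\in V_1^{\perp}$. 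Now decompose $f$ orthogonally as $f=f_0+f_1+f_2$ with $f_0\in V_1$, $f_1\in V_1'\cap V_1^{\perp}$, $f_2\in(V_1')^{\perp}$. Since $\mathrm{T}$ fixes $f_0$ and preserves $V_1^{\perp}$,
\[
\inner{f}{\mathrm{T}^2 f}=\norm{\mathrm{T}f}_2^2\leq\norm{f_0}_2^2+(1-c\beta\delta)^2\big(\norm{f_1}_2^2+\norm{f_2}_2^2\big),
\]
while $\mathrm{T}'^{2s}$ fixes $f_0+f_1\in V_1'$ and acts on $(V_1')^{\perp}$ with eigenvalues at least $(1-C\delta)^{2s}$ (the eigenvalues of $\mathrm{T}'$ there being in $[1-C\delta,1]\subseteq(0,1]$), so
\[
\inner{f}{\mathrm{T}'^{2s}f}\geq\norm{f_0}_2^2+\norm{f_1}_2^2+(1-C\delta)^{2s}\norm{f_2}_2^2.
\]
Subtracting, the difference is at least $\big(1-(1-c\beta\delta)^2\big)\norm{f_1}_2^2+\big((1-C\delta)^{2s}-(1-c\beta\delta)^2\big)\norm{f_2}_2^2$, whose two coefficients are nonnegative as soon as $(1-C\delta)^{s}\geq1-c\beta\delta$. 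Since $-\log(1-c\beta\delta)\geq c\beta\delta$ and $-\log(1-C\delta)\leq2C\delta$ for $\delta\leq\delta_0$, the choice $s=\min\{1,\,c\beta/(2C)\}$ --- depending only on $m,\alpha,\beta$ --- makes this hold for every admissible $\delta$ at once, completing the argument.

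\paragraph{Main obstacle.}
The crux is that $\mathrm{T}$ and $\mathrm{T}'$ need not commute, so $\norm{\mathrm{T}^{\otimes n}f}_2$ and $\norm{\mathrm{T}'^{\otimes n}f}_2$ cannot be compared coordinate‑by‑coordinate through a common eigenbasis, and a direct attempt would lose an exponent growing with $n$. The resolution I propose --- proving the stronger single‑coordinate L\"owner inequality $\mathrm{T}^2\preceq\mathrm{T}'^{2s}$ --- sidesteps this because the L\"owner order is preserved under tensor powers while being accessible from the crude spectral estimates of Lemma~\ref{lem:mossel_MC}. The one genuinely quantitative point is the selection of $s$: both the spectral‑gap bound $c\beta\delta$ for $\mathrm{T}$ (from the transition lower bound) and the laziness bound $C\delta$ for $\mathrm{T}'$ (from assumption~(1)) are \emph{linear} in $\delta$, and it is exactly this matching of rates that allows a single $s$ of order $\beta$ to work uniformly over all small $\delta$; the concavity step in (iii) is where the power, rather than a linear comparison, enters the final bound.
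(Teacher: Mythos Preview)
Your proof is correct and takes a genuinely different route from the paper's. The paper works directly at the $n$-variate level: it decomposes $L_2(\Sigma^n,\mu^{\otimes n})=\bigoplus_{S\subseteq[n]} V^{\otimes S}\otimes (V^\perp)^{\otimes [n]\setminus S}$ with $V=\ker(I-\mathrm{T})$, observes that both $\mathrm{T}^{\otimes n}$ and $\mathrm{T}'^{\otimes n}$ preserve each summand $V_S$ (using the same containment $V_1\subseteq V_1'$ you use), and then bounds $\norm{\mathrm{T}^{\otimes n}f_S}_2^2\leq(1-c\delta)^{2(n-|S|)}$ from above and $\norm{\mathrm{T}'^{\otimes n}f_S}_2^2\geq(1-C\delta)^{2(n-|S|)}\geq(1-c\delta)^{2A(n-|S|)}$ from below via the same spectral facts from Lemma~\ref{lem:mossel_MC}; the conclusion then comes from H\"older applied to the sum $\sum_S|c_S|^2(1-c\delta)^{2(n-|S|)}$, which is your Jensen step in different clothing. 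Your approach instead packages the single-coordinate content into the L\"owner inequality $\mathrm{T}^2\preceq\mathrm{T}'^{2s}$ and invokes the general fact that the L\"owner order tensorizes; this is more modular and avoids ever writing down the $V_S$ decomposition. The paper's approach is more concrete and avoids defining fractional powers of $\mathrm{T}'$, but both rely on the same implicit small-$\delta$ assumption (needed in the paper for $(1-C\delta)^{2(n-|S|)}$ to be a valid lower bound on $\norm{\mathrm{T}'^{\otimes n}f_S}_2^2$), which you correctly flag. The resulting exponents match: your $s=\min(1,c\beta/(2C))$ corresponds to the paper's $s=1/A$ with $A\approx C/(c\beta)$.
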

\begin{proof}
  Suppose for simplicity of notation that $\card{\Sigma} = m$.
  Let $C_1\cup\ldots\cup C_k$ be a partition of $\Sigma$ into the connected components of $\mathrm{T}$, and let
  $V \subseteq L_2(\Sigma)$ be the sub-space of functions that are constant on each $C_i$; we take $V' = V^{\perp}$,
  so that $L_2(\Sigma^n,\mu^{\otimes n}) = \bigoplus_{S\subseteq [n]}\spn(V_S)$ where $V_S = V^{\otimes S}\otimes V'^{\otimes[n]\setminus S}$.

  We note that as $\mathrm{T}$ is self adjoint (by reversibility) and $V$ is an invariant space of $\mathrm{T}$, it follows that $V'$ is
  also an invariant space of $\mathrm{T}$.  We thus write $f = \sum\limits_{S\subseteq [n]} c_S f_S$ where $f_S\in \spn(V_S)$ has $2$-norm
  equal to $1$, and get that $\mathrm{T} f_S$ is also in $\spn(V_S)$. Similarly, $\mathrm{T}' f_S$ is in $\spn(V_S)$, as
  $\mathrm{T}'$ preserves $V$ and as each connected component of $\mathrm{T}'$ is contained in a connected component of $\mathrm{T}$. Thus, we get that
  \[
    \norm{\mathrm{T}^{\otimes n} f}_2^2
    =
    \sum\limits_{S} \card{c_S}^2\norm{\mathrm{T}^{\otimes n} f_S}_2^2,
    \qquad\qquad
    \norm{\mathrm{T}'^{\otimes n} f}_2^2
    =
    \sum\limits_{S} \card{c_S}^2\norm{\mathrm{T}'^{\otimes n} f_S}_2^2.
  \]
  We now argue that
  $\norm{\mathrm{T}^{\otimes n} f_S}_2^2\leq (1-c(\alpha,\beta,m)\delta)^{2(n-\card{S})}$ and
  $\norm{\mathrm{T}'^{\otimes n} f_S}_2^2\geq (1-C(\alpha,\beta,m)\delta)^{2(n-\card{S})}$ where $c,C>0$.
  Indeed, for the first inequality we further decompose $V$ and $V'$ into eigenspaces of $\mathrm{T}$
  (as in Lemma~\ref{lem:op_comparison_lemma}) and proceed with the same computation as there; we use
  Lemma~\ref{lem:mossel_MC} to upper bound the eigenvalues of $\mathrm{T}$. For the second
  inequality, we decompose $V$ and $V'$ into eigenspaces of $\mathrm{T}'$, and proceed with the same computation as there; we use
  Lemma~\ref{lem:mossel_MC} to lower bound the eigenvalues of $\mathrm{T}'$. Thus, we conclude that there is a constant $A(\alpha,\beta,m)$
  such that
  \[
  \norm{\mathrm{T}^{\otimes n} f}_2^2
  \leq
  \sum\limits_{S} \card{c_S}^2(1-c(\alpha,\beta,m)\delta)^{2(n-\card{S})},
  \qquad
  \norm{\mathrm{T}'^{\otimes n} f}_2^2
  \geq
  \sum\limits_{S} \card{c_S}^2(1-c(\alpha,\beta,m)\delta)^{2A(n-\card{S})}.
  \]
  By H\"{o}lder's inequality, we get that $\norm{\mathrm{T}^{\otimes n} f}_2^{2A}$ can be upper bounded as
  \begin{align*}
    \left(\sum\limits_{S} \card{c_S}^2(1-c(\alpha,\beta,m)\delta)^{2(n-\card{S})}\right)^{A}
    &=
    \left(\sum\limits_{S} \card{c_S}^{2(A-1)/A}\cdot \card{c_S}^{2/A}(1-c(\alpha,\beta,m)\delta)^{2(n-\card{S})}\right)^{A}\\
    &\leq
    \left(\sum\limits_{S} \card{c_S}^2\right)^{A-1}
    \sum\limits_{S} \card{c_S}^{2}(1-c(\alpha,\beta,m)\delta)^{2A(n-\card{S})}\\
    &\leq \norm{f}_2^{2(A-1)}\norm{\mathrm{T}'^{\otimes n} f}_2^2,
  \end{align*}
  where we used Parseval. As $\norm{f}_2\leq 1$, taking $2A$-th root finishes the proof.
\end{proof}

\section{On the Master Embedding, Path Trick, and Their Interaction}\label{sec:master_embed}
In this section, we present master embeddings as well as the path trick from~\cite{BKMcsp1,BKMcsp2}, and prove some properties of them that will be crucial to us.
The path trick is an idea which was used in previous works in this series, and for multiple reasons. In~\cite{BKMcsp1} it was used to enrich the distribution
$\mu$ so as to gain pairwise independence, and in~\cite{BKMcsp2} it was used to gain a more limited form of pairwise independence as well as for
overcoming a certain technical challenge referred to as the ``Horn-SAT obstruction'' therein.\footnote{This obstruction will also appear in the present work and here
too we will make an essential use of the path trick to resolve it; this will be the topic of discussion in Section~\ref{sec:base_case}.} One important property
of the path trick used in both of these works, is that if a distribution $\mu$ does not admit Abelian embeddings, then applying the path trick on
it results in a distribution that also does not admit Abelian embeddings

In the current work, however, we have to deal with distribution admitting Abelian embeddings, and thus we have to dig deeper.
In particular, we have to study the interaction between the path trick and  Abelian embeddings of $\mu$, which is the primary topic
of this section. To do that, we first
define the master embedding, which is an embedding of $\mu$ that ``encapsules'' within it all Abelian embeddings on $\mu$.
We then
show two properties of the path trick and master embeddings:
\begin{enumerate}
  \item \textbf{The Master is Preserved Under Path Tricks.}
  We show that the path trick never ``introduces'' new Abelian embeddings. By that, we mean that from any
Abelian embedding of the original distribution $\mu$ one can construct an Abelian embedding of distribution after the path trick,
and furthermore (and this is the important part) there are no other Abelian embeddings. This means that the a master embedding for
$\mu$ remains a master embedding for the distribution after the path trick (after an appropriate transformation).
  \item \textbf{The Master is Saturated after Path Tricks.} For reasons that were discussed in the introduction
  and will be further discussed below, it is desirable for us that the image of our master embeddings will be a whole group.
  A-priori, there is no reason this will be the case, and indeed it is most often not. We show that, by applying
  the path trick in a certain way, we are able to enrich the predicate so that the image of the master embeddings become complete
  Abelian groups.
\end{enumerate}

\subsection{Defining the Master Embedding}
Let $\Sigma$, $\Gamma$, and $\Phi$ be finite alphabets and let $\mu$ be a distribution over $\Sigma\times\Gamma\times \Phi$. In this section,
we wish to show that there is a single Abelian group and a single embedding of $\mu$ that within it ``encapsulates'' all Abelian embeddings of $\mu$,
which we often refer to as a master embedding of $\mu$.

To start getting some intuition, note that every Abelian embedding of $\mu$ given by $\sigma\colon \Sigma\to (H,+)$, $\gamma\colon\Gamma\to(H,+)$ and $\phi\colon \Phi\to(H,+)$,
where $(H,+)$ is a finite Abelian group can be thought about as partitions of each one of the alphabets $\Sigma$, $\Gamma$ and $\Phi$, and labeling each
part of each partition by a group element. Thus, as the number of these partitions is a finite number depending only on the alphabet sizes, it makes sense that we would not need
to look into too large of groups to find proper labelings of that partition by group elements so as to get an Abelian embedding (if such one exists).
This requires some care, and towards this end we define the notion of equivalent embeddings.
\begin{definition}
  Let $\mu$ be a distribution over $\Sigma\times \Gamma\times \Phi$. We say an embedding
  $\sigma \colon\Sigma\to (G,+)$, $\gamma\colon \Gamma\to (G,+)$ and $\phi\colon \Phi \to (G,+)$
  is a linear reduction of
  $\sigma' \colon\Sigma\to (H,+)$, $\gamma'\colon \Gamma\to (H,+)$ and $\phi'\colon \Phi \to (H,+)$
  if $G$ is a subgroup of $H$, and there are injective linear maps
  $m_1, m_2, m_3\colon G \to H$,
  such that $\sigma'(x) = m_1(\sigma(x))$, $\gamma'(y) = m_2(\gamma(y))$ and $\phi'(z) = m_3(\phi(z))$.
\end{definition}
As a concrete example for the notion of linear refinements, we note that if $\sigma$, $\gamma$ and $\phi$
form an embedding of $\mu$ into $(\mathbb{Z}_p,+)$, then $p \sigma$, $p\gamma$ and $p\phi$ form
an embedding of $\mu$ into $(\mathbb{Z}_{p^2},+)$. In essence though, these two embeddings
are ``the same'', thus to encapsulate all of the Abelian embeddings of $\mu$ it suffices
to only take into account one of them. Indeed, one can observe that $(\sigma,\gamma,\phi)$ is a linear
reduction of $(p\sigma,p\gamma,p\phi)$.

\skipi

Below, we consider a distribution $\mu$ that does not admit non-trivial Abelian embeddings into $(\mathbb{Z},+)$
and gradually construct a master embedding of it.
Towards this end, we consider the basic building block of all Abelian groups, namely cyclic groups,
and show that modulo equivalences, $\mu$ can admit only $O_{m}(1)$ many Abelian embeddings,
where $m$ is an upper bound on the alphabet sizes.

\subsubsection{Embeddings into Cyclic Groups}
We start by showing that if $\mu$ does not admit any $(\mathbb{Z},+)$ embedding, then $\mu$ cannot have Abelian
embedding into arbitrarily large Abelian groups of prime order.
\begin{lemma}\label{lem:bound_p_in_embed}
  Let $m\in\mathbb{N}$, and suppose that $\Sigma$, $\Gamma$ and $\Phi$ are finite alphabets of size at most $m$.
  Then there exists $r\in\mathbb{N}$, such that for any distribution $\mu$ over $\Sigma\times\Gamma\times \Phi$,
  if $\mu$ does not admit an Abelian embedding into $(\mathbb{Z},+)$, then $\mu$ does not admit any Abelian
  embedding to $(\mathbb{F}_p,+)$ for $p>r$.
\end{lemma}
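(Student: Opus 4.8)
The plan is to argue by a compactness/finiteness argument on the structure of embeddings, turning an embedding into $(\mathbb{F}_p,+)$ for large $p$ into an embedding into $(\mathbb{Z},+)$. First I would record the only combinatorial data that an Abelian embedding carries: a map $\sigma\colon\Sigma\to H$ is determined, up to the group it lands in, by the partition of $\Sigma$ into fibers $\sigma^{-1}(a)$ together with the tuple of values $(\sigma(a))_{a\in\Sigma}$ satisfying the linear constraints $\sigma(x)+\gamma(y)+\phi(z)=0$ for $(x,y,z)\in\mathsf{supp}(\mu)$. Since $|\Sigma|,|\Gamma|,|\Phi|\le m$, there are only finitely many possible partitions of the three alphabets; and once the partitions are fixed, the constraint ``$\sigma(x)+\gamma(y)+\phi(z)=0$ on the support'' is a system of linear equations over the unknowns, which are the (at most $3m$) distinct values taken by $\sigma,\gamma,\phi$. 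The key point is that this linear system has \emph{integer coefficients} (indeed coefficients in $\{-1,0,1\}$) and does not depend on the ambient group — only the solution set does.

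Next I would run the following dichotomy. Fix a choice of the three fiber-partitions (only finitely many). Consider the integer matrix $A$ of the corresponding homogeneous linear system, whose variables are the distinct labels $v_1,\dots,v_k$ ($k\le 3m$) together with the requirement that variables in different fibers of the same alphabet be distinct and that at least one label-map be non-constant (this non-triviality we can encode by insisting some specific difference $v_i-v_j$ is nonzero). Over $\mathbb{Z}$: either this system has a nonzero solution with the required distinctness — in which case $\mu$ admits a non-trivial $(\mathbb{Z},+)$ embedding, contradicting the hypothesis — or it does not. If it does not, then the rational solution space of $A$ satisfies: every rational solution has $v_i=v_j$ whenever fibers $i,j$ must be distinct, OR all label-maps are forced constant. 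In the first case, look at the same system over $\mathbb{F}_p$: by Smith normal form of $A$, for all primes $p$ exceeding the largest invariant factor (equivalently, exceeding the largest absolute value of any $k\times k$ minor of $A$, which is bounded in terms of $m$ alone), the $\mathbb{F}_p$-solution space is the reduction mod $p$ of the $\mathbb{Z}$-solution space; hence over $\mathbb{F}_p$ too every solution collapses the ``distinct'' fibers or is constant, so it cannot realize this partition non-trivially. In the second case the partition simply cannot support a non-trivial embedding in any group.

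Finally I would assemble the bound: let $r$ be the maximum, over the finitely many partition-choices, of the largest prime divisor of any nonzero maximal minor of the associated integer matrix $A$ (with $r=1$ if all minors vanish trivially), which depends only on $m$. For any $p>r$ and any distribution $\mu$ as in the hypothesis, an embedding of $\mu$ into $(\mathbb{F}_p,+)$ would, after passing to its fiber-partitions, give a non-trivial $\mathbb{F}_p$-solution of one of these systems realizing those partitions; but for $p>r$ we have shown such a solution would lift to (equivalently, comes by reduction from) a $\mathbb{Z}$-solution, contradicting the assumption that $\mu$ has no $(\mathbb{Z},+)$ embedding. Hence no such $p>r$ exists, proving the lemma.

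The main obstacle I expect is the careful bookkeeping in the Smith-normal-form step: one must make sure that ``non-triviality'' and the fiber-distinctness conditions (which are \emph{inequations}, not equations) transfer correctly between $\mathbb{Z}$ and $\mathbb{F}_p$. The clean way to handle this is to note there are only finitely many inequations $v_i\ne v_j$ in play, intersect each with the equation system, and apply the Smith-normal-form comparison to each of the finitely many resulting systems separately; for $p$ larger than all the relevant minors simultaneously, reduction mod $p$ is a bijection between the $\mathbb{Z}$- and $\mathbb{F}_p$-solution spaces of every such system, and non-vanishing of a given coordinate difference is preserved. Everything else is routine linear algebra over $\mathbb{Z}$.
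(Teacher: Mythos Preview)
Your approach is correct and rests on the same core idea as the paper: the embedding constraints $\sigma(x)+\gamma(y)+\phi(z)=0$ on $\mathsf{supp}(\mu)$ form a homogeneous integer linear system in at most $3m$ variables, and for primes $p$ exceeding a bound depending only on the matrix (hence only on $m$), the $\mathbb{F}_p$-solution space is exactly the reduction of the $\mathbb{Z}$-solution space, so non-trivial $\mathbb{F}_p$-solutions lift to non-trivial $\mathbb{Z}$-solutions.

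The paper streamlines your argument by eliminating the partition layer and the inequation bookkeeping entirely. It fixes some $(x^\star,y^\star,z^\star)\in\mathsf{supp}(\mu)$ and sets $V_{x^\star}=V_{y^\star}=V_{z^\star}=0$; every embedding is a shift of one with this normalization, so ``no non-trivial $(\mathbb{Z},+)$ embedding'' becomes simply ``the reduced system has only the zero solution over $\mathbb{Z}$'', i.e., full $\mathbb{Q}$-rank. Then integer Gaussian elimination (multiplying rows to clear pivots, never dividing) puts the system in triangular form $c_iV_i=0$ with $|c_i|\le C(m)$, and for $p>C(m)$ each $c_i$ is a unit in $\mathbb{F}_p$, forcing $V_i=0$. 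This single normalization replaces your finitely-many-partitions loop and your Smith-normal-form plus inequation analysis; what it buys is that non-triviality becomes the single condition ``solution $\ne 0$'' rather than a disjunction of coordinate-difference conditions. Your route via Smith normal form is equally valid and arguably more conceptual, but the paper's is shorter.
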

\begin{proof}
  Write
  $\Sigma = \{x_1,\ldots,x_{m_1}\}$,
  $\Gamma = \{y_1,\ldots,y_{m_2}\}$
  and
  $\Phi = \{z_1,\ldots,z_{m_3}\}$,
  and associate with each symbol $\Sigma$ a variable $V_{x_i}$ and similarly for symbols in $\Gamma$ and $\Phi$.
  We will think of these variables as representing the values of an embedding of $\mu$, and so the conditions
  that they form an embedding can be written as the system of linear equations $V_x + V_y + V_z = 0$
  for all $(x,y,z)\in {\sf supp}(\mu)$.   Thus, the fact that there are no embedding of $\mu$ into $(\mathbb{Z},+)$
  is equivalent to the fact that over integers, the only solution to this system of equations are trivial constant
  (say, assigning all $x$-variables the value $17$, all $y$-variables the value $-6$, and all $z$-variables the value $-11$).
  We pick some $(x^{\star}, y^{\star}, z^{\star})\in {\sf supp}(\mu)$ and assign $V_{x^{\star}}$, $V_{y^{\star}}$ and
  $V_{z^{\star}}$ the value $0$, each solution of the original system corresponds (over any Abelian group) corresponds
  to a shift of a solution of the new system. Thus, the new system only has the trivial all $0$ solution over $(\mathbb{Z},+)$.

  We write this system in matrix form as $M V = 0$, and bring it to diagonal form without performing divisions. Namely, each time we pick a unpivoted equation
  from our system, then a variable from it, say $V_i$. We multiply all equations by appropriate constants so that the coefficients of $V_i$ in
  each equation is the same (or $0$ if $V_i$ doesn't appear in that equation), then subtract the chosen equation from all equations in which $V_i$ appears,
  and then declare the equation as pivoted. In the end of the process, we will end up with a system of equations of the form $c_i V_i = 0$ for $c_i\in\mathbb{Z}$,
  where all $c_i$ are bounded by a universal constant $C(m)$ (as the coefficients grow by at most a constant factor depending only on $m$ in each step,
  and the number of steps is at most $m^3$). We write this system as $M'V = 0$, and now the fact that there are no solutions over $\mathbb{Z}$ means that
  the rank of $M'$ is full, namely $3m$.

  Thus, for primes $p > C(m)$, we note that an $\mathbb{F}_p$ solution of the system is also a solution over integers. Indeed, for the equation $c_i V_i = 0\pmod{p}$
  to hold for $V_i\in\mathbb{F}_p$, it must be the case that either $V_i = 0$, or else $c_i$ must be divisible by $p$. It follows that there are no non-trivial
  embeddings over $(\mathbb{F}_p,+)$.
\end{proof}

Next, we discuss Abelian embeddings into cyclic groups of prime power order. Here, the situation is slightly trickier, as if $\mu$ has an embedding
into $(\mathbb{F}_p,+)$, then by multiplying it by $p^{k-1}$ one automatically gets an embedding into $(\mathbb{Z}_{p^{k}},+)$. In the following lemma,
we show that there is $k_0 = k_0(m)\in\mathbb{N}$ such that to ``exhaust'' all embeddings into groups of the form $(\mathbb{Z}_{p^k},+)$, it suffices
to consider $k\leq k_0$, in the sense that an embedding into $(\mathbb{Z}_{p^k},+)$ for $k>k_0$ is equivalent to an embedding into
$(\mathbb{Z}_{p^{k_0}},+)$.
\begin{lemma}\label{lem:bound_k_in_embed}
  Let $m\in\mathbb{N}$, and suppose that $\Sigma$, $\Gamma$ and $\Phi$ are finite alphabets of size at most $m$.
  Then there exists $p_0\in\mathbb{N}$ and $k_0\in\mathbb{N}$,
  such that for any distribution $\mu$ over $\Sigma\times\Gamma\times \Phi$ that doesn't admit any $(\mathbb{Z},+)$ embedding, if
  $\mu$ has a non-trivial Abelian embedding $(\sigma,\gamma,\phi)$ into $(\mathbb{Z}_{p^k},+)$,
  then $p\leq p_0$ and there is $k'\leq k_0$ and an Abelian embedding $(\sigma',\gamma',\phi')$ into $(\mathbb{Z}_{p^{k'}},+)$,
  such that $(\sigma',\gamma',\phi')$ is a linear reduction of $(\sigma,\gamma,\phi)$.
\end{lemma}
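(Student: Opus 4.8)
The statement has two parts — a bound $p\le p_0$ and a bound $k'\le k_0$ — and I would prove them by two essentially independent reductions: the first to Lemma~\ref{lem:bound_p_in_embed}, the second to the theory of Smith normal forms of integer matrices.

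\paragraph{Bounding $p$.} Given a non-trivial embedding $(\sigma,\gamma,\phi)$ into $(\mathbb{Z}_{p^k},+)$, first normalize it: fix an atom $(x^\star,y^\star,z^\star)\in{\sf supp}(\mu)$ and subtract the constants $\sigma(x^\star),\gamma(y^\star),\phi(z^\star)$; these sum to $0$, so the result is again an embedding, constant shifts preserve non-triviality, and now every value of $\sigma$ lies in the cyclic subgroup of $\mathbb{Z}_{p^k}$ generated by the differences $\sigma(x)-\sigma(x')$ (likewise for $\gamma,\phi$). Let $p^{j}$ be the largest power of $p$ dividing every value; since some value is nonzero we have $j<k$, dividing every value by $p^{j}$ yields an embedding into $(\mathbb{Z}_{p^{k-j}},+)$, and reducing \emph{that} one modulo $p$ yields an embedding into $(\mathbb{F}_p,+)$ in which the base atom still maps to $0$ while, by maximality of $j$, some value is nonzero — hence a non-trivial $(\mathbb{F}_p,+)$-embedding of $\mu$. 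Since $\mu$ admits no $(\mathbb{Z},+)$-embedding, Lemma~\ref{lem:bound_p_in_embed} bounds $p$ by a constant $r=r(m)$, so we may take $p_0:=r$.

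\paragraph{Bounding $k$.} For the second part I would rerun the linear-algebra setup in the proof of Lemma~\ref{lem:bound_p_in_embed}, but now keeping track of the column operations as well. Working with the \emph{normalized} system (w.l.o.g.\ the given embedding is normalized at $(x^\star,y^\star,z^\star)$, which is harmless for the master-embedding construction of the next subsection), let $M$ be the $0/1$ matrix of the equations $\{V_u+V_v+V_w=0:(u,v,w)\in{\sf supp}(\mu)\}\cup\{V_{x^\star}=V_{y^\star}=V_{z^\star}=0\}$ in the $N\le 3m$ variables indexed by the symbols of $\Sigma,\Gamma,\Phi$. The hypothesis that $\mu$ has no $(\mathbb{Z},+)$-embedding says exactly that $M$ has trivial kernel over $\mathbb{Q}$, so its Smith normal form $UMW$, with $U,W$ integer invertible matrices, equals $\mathrm{diag}(d_1,\dots,d_N)$ padded with zero rows, where all $d_i\ne 0$. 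Moreover $|d_1\cdots d_N|$ equals the gcd of the $N\times N$ minors of $M$, hence divides — and so is bounded by — any such minor, which by Hadamard's inequality is at most $(3m)^{3m/2}=:B(m)$; consequently every $p$-adic valuation satisfies $v_p(d_i)\le \lceil\log_2 B(m)\rceil=:k_0$, a constant depending only on $m$.

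Now let $(\sigma,\gamma,\phi)$ be a normalized embedding into $\mathbb{Z}_{p^k}$; if $k\le k_0$ we are done with $k'=k$ and the identity maps, so assume $k>k_0$. Viewing $(\sigma,\gamma,\phi)$ as a solution $V\in(\mathbb{Z}_{p^k})^N$ of $MV=0$ and putting $V'=W^{-1}V$, the equations become $d_iV'_i=0$, hence $p^{k_0}V'_i=0$, hence $V'_i\in p^{k-k_0}\mathbb{Z}_{p^k}$; write $V'_i=p^{k-k_0}w_i$ with $w_i$ well-defined mod $p^{k_0}$. Then $V=WV'=p^{k-k_0}\tilde V$ with $\tilde V:=W(w_1,\dots,w_N)$, and from $p^{k-k_0}M\tilde V\equiv MV\equiv 0\pmod{p^k}$ we get $M\tilde V\equiv 0\pmod{p^{k_0}}$, i.e.\ $\tilde V$ defines a normalized embedding $(\sigma',\gamma',\phi')$ of $\mu$ into $(\mathbb{Z}_{p^{k_0}},+)$ with $\sigma(x)=p^{k-k_0}\sigma'(x)$, $\gamma(y)=p^{k-k_0}\gamma'(y)$, $\phi(z)=p^{k-k_0}\phi'(z)$. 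The common injective linear map $a\mapsto p^{k-k_0}a$ from $\mathbb{Z}_{p^{k_0}}$ into $\mathbb{Z}_{p^k}$ then exhibits $(\sigma',\gamma',\phi')$ as a linear reduction of $(\sigma,\gamma,\phi)$, so $k'=k_0$ works.

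\paragraph{Main obstacle.} The one genuinely delicate point is that, in contrast to the proof of Lemma~\ref{lem:bound_p_in_embed}, one cannot Gaussian-eliminate over $\mathbb{Z}$ and then reduce modulo $p$: multiplying an equation by a scalar sharing a factor with $p^k$ changes the solution set over $\mathbb{Z}_{p^k}$. This forces the use of the full Smith normal form (tracking both $U$ and the change of basis $W$, so that solutions transform correctly), and one must be careful that normalizing at a base atom — which is what removes the trivial/constant-shift part of the embedding from the system, making $\sigma=p^{k-k_0}\sigma'$ an honest equality rather than an equality up to a constant — is indeed without loss of generality for the downstream master-embedding construction. Handling this trivial part correctly, and making the Hadamard-type bound on the invariant factors uniform in $|{\sf supp}(\mu)|$ (the number of rows of $M$), are the places that need the most care.
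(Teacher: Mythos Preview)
Your proof is correct and follows the same strategy as the paper: normalize at a base atom, encode the constraints as an integer linear system $MV=0$ with full $\mathbb{Q}$-rank, and deduce that every coordinate of the given solution lies in $p^{k-k_0}\mathbb{Z}_{p^k}$. The paper differs only in using plain Gaussian elimination (row operations over $\mathbb{Z}$, no divisions) to reach a diagonal system $c_iV_i=0$ with $|c_i|\le C(m)$, taking $k_0=\lceil\log C(m)\rceil$. Your ``main obstacle'' concern is misplaced: it is true that scaling a row by a non-unit enlarges the solution set over $\mathbb{Z}_{p^k}$, but one does not need the solution sets to coincide --- one only needs the \emph{given} embedding $V$ (which satisfies $MV\equiv 0\pmod{p^k}$) to also satisfy the diagonalized system, and this is automatic since every diagonalized equation is a $\mathbb{Z}$-linear combination of the original ones. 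From $c_iV_i\equiv 0\pmod{p^k}$ with $v_p(c_i)<k_0$ the paper then divides through by $p^{k-k_0}$ exactly as you do. So the Smith-normal-form apparatus (column operations, tracking $W$) is unnecessary here, though it is of course a valid alternative and yields a tidier Hadamard-based bound on $k_0$; likewise your construction of an explicit $\mathbb{F}_p$-embedding to bound $p$ is a valid and slightly more self-contained version of the paper's one-line appeal to the argument of Lemma~\ref{lem:bound_p_in_embed}.
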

\begin{proof}
  We write a system of linear equations and bring it to a diagonal form as in the proof of Lemma~\ref{lem:bound_p_in_embed},
  and let $C(m)\in\mathbb{N}$ be an upper bound on the size of all of the coefficients there. We prove the statement for $p_0 = C(m)$
  and $k_0 = \lceil \log C(m)\rceil$. If $k\leq k_0$ we are done, so assume otherwise.

  By applying appropriate affine shift, we can assume that $\sigma(x^{\star}) = \gamma(y^{\star}) = \phi(z^{\star}) = 0$.
  Then the fact that $p\leq p_0$ for some $p_0 = O_{m}(1)$ follows from the argument in Lemma~\ref{lem:bound_p_in_embed}, and we next argue about $k$.
  Consider an equation $c_i V_i = 0 \pmod{p^k}$ therein, and assume that $k>k_0$. Then $p^{k}$ cannot divide $c_i$, so we may write $c_i = p^{a_i} c_i'$
  where $a_i<k_0$ and $c_i'$ is relatively prime to $p$. This means that $V_i = 0 \pmod{p^{k-a_i}}$, and so $p$ divides $V_i$. This means that
  all values of $\sigma,\gamma,\phi$ are divisible by $p^{k-k_0}$, hence we may look at $\sigma' = \sigma/p^{k-k_0}$, $\gamma' = \gamma/p^{k-k_0}$
  and $\phi' = \phi/p^{k-k_0}$ and get that $(\sigma',\gamma',\phi')$ form an embedding of $\mu$ into $(\mathbb{Z}_{p^{k_0}}, +)$, as required.
\end{proof}

\subsubsection{Finding Small Equivalent Embeddings on Finite Abelian Groups}
With Lemmas~\ref{lem:bound_p_in_embed},~\ref{lem:bound_k_in_embed} in hand, we can now address general Abelian embeddings,
and show that any Abelian embedding is equivalent to
an Abelian embedding into a group of bounded size.
\begin{lemma}\label{lem:exhaust_embed}
  For all $m\in \mathbb{N}$ there is $r$ such that for alphabets $\Sigma$, $\Gamma$, $\Phi$ of size at most $m$,
  if $\mu$ is a distribution over $\Sigma\times\Gamma\times \Phi$ that does not admit non-trivial Abelian embeddings
  over $(\mathbb{Z},+)$, and $(\sigma,\gamma,\phi)$ is an Abelian embedding of $\mu$,
  then there is an Abelian embedding $(\sigma',\gamma',\phi')$ into a group of size at most $r$
  which is a linear reduction of $(\sigma,\gamma,\phi)$.
\end{lemma}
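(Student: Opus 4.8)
The plan is to reduce a general finite Abelian embedding $(\sigma,\gamma,\phi)$ of $\mu$ to a ``small'' one by decomposing the target group into its primary components, handling each prime separately using Lemmas~\ref{lem:bound_p_in_embed} and~\ref{lem:bound_k_in_embed}, and then recombining. First I would recall that every finite Abelian group $H$ decomposes canonically as $H = \bigoplus_{p} H_p$ where $H_p$ is the $p$-primary component, and each $H_p$ further decomposes as a direct sum of cyclic groups $\mathbb{Z}_{p^{k}}$. Composing the embedding with the projection $\pi_{p,j}\colon H\to \mathbb{Z}_{p^{k_j}}$ onto each cyclic summand gives an Abelian embedding $(\pi_{p,j}\circ\sigma,\pi_{p,j}\circ\gamma,\pi_{p,j}\circ\phi)$ of $\mu$ into $\mathbb{Z}_{p^{k_j}}$. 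Since $\mu$ has no $(\mathbb{Z},+)$-embedding, Lemma~\ref{lem:bound_p_in_embed} bounds the primes $p$ that can occur by some $r_0 = r_0(m)$, and Lemma~\ref{lem:bound_k_in_embed} gives, for each such cyclic summand, a linear reduction of $(\pi_{p,j}\circ\sigma,\ldots)$ to an embedding into $\mathbb{Z}_{p^{k'}}$ with $k'\leq k_0(m)$.

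The second step is to bound the number of cyclic summands. A-priori $H$ could have very many summands, but I would argue that only boundedly many of the induced cyclic embeddings are ``independent'': each cyclic embedding into $\mathbb{Z}_{p^{k'}}$ is determined (up to the linear-reduction equivalence) by a labeling of the $O(m)$ symbols of $\Sigma\cup\Gamma\cup\Phi$ by group elements subject to the linear constraints $V_x+V_y+V_z=0$ on $\mathsf{supp}(\mu)$. Viewing these labelings as vectors in a solution space over the relevant ring and using the diagonalization from the proof of Lemma~\ref{lem:bound_p_in_embed}, the space of embeddings into all groups of the form $\mathbb{Z}_{p^{k'}}$ with $p\leq r_0$, $k'\leq k_0$ has dimension bounded by a function of $m$ alone. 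Hence there is a bounded set of ``generating'' cyclic embeddings $(\sigma_1,\gamma_1,\phi_1),\ldots,(\sigma_N,\gamma_N,\phi_N)$ into groups $H_1,\ldots,H_N$, each of size at most $r_0^{k_0}$, such that $(\sigma,\gamma,\phi)$ is a linear reduction of the tensor embedding into $H' = H_1\times\cdots\times H_N$, whose size is at most $r := (r_0^{k_0})^{N(m)}$, a constant depending only on $m$. One must check that ``linear reduction'' behaves well under this recombination: if each coordinate projection of $(\sigma,\gamma,\phi)$ linearly reduces to $(\sigma_i,\gamma_i,\phi_i)$, then $(\sigma,\gamma,\phi)$ itself linearly reduces to the product embedding, since the product of the injective linear maps $G_i\hookrightarrow H_i$ assembles into an injective linear map $\prod G_i \hookrightarrow \prod H_i$ restricting appropriately to the image.

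The main obstacle I expect is the bookkeeping around the linear-reduction equivalence: ensuring that the projections onto primary/cyclic components genuinely yield \emph{linear reductions} in the precise sense of the definition (injective linear maps, correct subgroup structure) rather than just arbitrary embeddings, and that these fit together so that the original embedding is recovered as a linear reduction of the bounded product embedding rather than merely ``factoring through'' it in a looser sense. A secondary subtlety is making the dimension/counting argument for $N(m)$ clean: one wants to say that the image of $(\sigma,\gamma,\phi)$ inside $H$ is generated by boundedly many elements coming from a bounded-size solution module, so that passing to that subgroup (which is itself of bounded size once the prime and exponent bounds from Lemmas~\ref{lem:bound_p_in_embed} and~\ref{lem:bound_k_in_embed} are in force) loses nothing. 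Once these structural points are pinned down, the rest is routine.
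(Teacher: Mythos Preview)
Your overall strategy matches the paper's proof: decompose $H$ into cyclic factors, invoke Lemmas~\ref{lem:bound_p_in_embed} and~\ref{lem:bound_k_in_embed} to bound the prime and the exponent of each factor, and then argue that only boundedly many factors are actually needed. The one substantive difference is in that last step. You propose a somewhat vague ``dimension of the solution module'' argument to bound the number $N$ of cyclic components. The paper instead uses a clean partition-refinement observation: call coordinate $i$ $\Sigma$-redundant if dropping $\sigma_i$ does not change the partition of $\Sigma$ induced by $(\sigma_1,\ldots,\sigma_k)$, and similarly for $\Gamma,\Phi$; a coordinate that is redundant for all three alphabets can be dropped outright, and since each non-$\Sigma$-redundant coordinate strictly refines the partition of $\Sigma$, there are at most $|\Sigma|$ of them (and likewise for $\Gamma,\Phi$). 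Hence at most $|\Sigma|+|\Gamma|+|\Phi|\le 3m$ coordinates survive, giving the bound $r = O_m(1)^{3m}$ directly. This is more elementary than setting up a module-theoretic dimension count and sidesteps the ``generating set'' bookkeeping you flag as an obstacle.

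One correctness note: you have the direction of linear reduction reversed in several places. By the paper's definition, $(\sigma',\gamma',\phi')$ into $G$ is a linear reduction of $(\sigma,\gamma,\phi)$ into $H$ when $G$ is (identified with) a subgroup of $H$ and $\sigma = m_1\circ\sigma'$ for injective linear $m_1\colon G\to H$. So the conclusion you want is that the \emph{small} product embedding is a linear reduction of the original $(\sigma,\gamma,\phi)$, not the other way around; your sentence ``$(\sigma,\gamma,\phi)$ is a linear reduction of the tensor embedding into $H'$'' says the opposite and should be flipped.
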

\begin{proof}
  Let $\sigma$, $\gamma$ and $\phi$ be an Abelian embedding of $\mu$ into an Abelian group $(H,+)$.
  By the fundamental theorem of finite Abelian groups there are primes $p_1,\ldots,p_k$ and integers $r_1,\ldots,r_k\geq 1$
  such that
  \[
  H = \prod\limits_{i=1}^{k}\mathbb{Z}_{p_i^{r_i}},
  \]
  and with this identification we can
  write $\sigma\colon \Sigma \to H$ as $\sigma = (\sigma_1,\ldots,\sigma_k)$ where $\sigma_i\colon\Sigma\to \mathbb{Z}_{p_i^{r_i}}$,
  and similarly write $\gamma = (\gamma_1,\ldots,\gamma_k)$ and $\phi = (\phi_1,\ldots,\phi_k)$. We assume each $(\sigma_i,\gamma_i,\phi_i)$
  is non-trivial, otherwise we may drop it altogether.

  Note that each $(\sigma_i,\gamma_i,\phi_i)$ forms a cyclic Abelian embedding of $\mu$, so by Lemma~\ref{lem:bound_p_in_embed} is follows that $p_i\leq O_m(1)$ for all $i$.
  By Lemma~\ref{lem:bound_k_in_embed} we get that $(\sigma_i,\gamma_i,\phi_i)$ is equivalent to an Abelian embedding $(\sigma_i',\gamma_i',\phi_i')$ into $\mathbb{Z}_{p_i^{r_i'}}$ for $r_i'\leq O_m(1)$.
  Thus $(\sigma,\gamma,\phi)$ is equivalent to $(\sigma',\gamma',\phi')$ where
  $\sigma' = (\sigma_1',\ldots,\sigma_k')$, $\gamma' = (\gamma_1',\ldots,\gamma_k')$ and $\phi' = (\phi_1',\ldots,\phi_k')$.
  Now, $(\sigma',\gamma',\phi')$ is an embedding of $\mu$ into $H' = \prod\limits_{i\leq k}\mathbb{Z}_{p_i^{r_i'}}$. In other words,
  we managed to reduce each $p_i^{r_i}$ to $p_i'$ that is bounded and get an equivalent embedding, and to simplify notation we
  drop the primes and assume the embeddings $\sigma, \phi$ and $\phi$ as well as $H$ are of this form to begin with.

  Next, we clean up redundancies. We say a coordinate $i$ is $\Sigma$-redundant if the partition of $\Sigma$ induced by
  $\sigma = (\sigma_1,\ldots,\sigma_k)$ is the same as the partition induced by $\sigma_{-i} = (\sigma_1,\ldots,\sigma_{i-1},\sigma_{i+1},\ldots,\sigma_k)$;
  similarly we define the notions of $\Gamma$-redundant and $\Phi$-redundant coordinates. Note that if $i$ is $\Sigma$-redundant, $\Gamma$-redundant
  and $\Phi$-redundant, then $(\sigma,\gamma,\phi)$ is equivalent to $(\sigma_{-i},\gamma_{-i},\phi_{-i})$. Also note that there are at most $\card{\Sigma}$
  non $\Sigma$-redundant coordinates, as well as at most $\card{\Gamma}$ non $\Gamma$-redundant coordinates and at most $\card{\Phi}$ non $\Phi$-redundant coordinates.
  Thus, we can eliminate all but at most $k' = \card{\Sigma}+\card{\Gamma}+\card{\Phi}$ of the coordinates using this process, and thus get an embedding
  into a group of size at most $O_m(1)^{k'}$, which is a size depending only on the alphabet size, as required.
\end{proof}

\subsubsection{The Master Embedding}
In this section we formally define the notion of master embeddings, which is crucial for our arguments.
We begin by formally defining the notion of master embeddings.

\begin{definition}
  Let $\Sigma$, $\Gamma$ and $\Phi$ be finite alphabets, let $\mu$ be a distribution over $\Sigma\times \Gamma\times \Phi$,
   let $(H,+) = \prod_{i=1}^{s} (H_i,+)$ be an Abelian group and let $\sigma\colon \Sigma\to H$, $\gamma\colon \Gamma\to H$ and $\phi\colon \Phi\to H$ be an Abelian embedding
  of $\mu$. We say that $(\sigma,\gamma,\phi)$ is a master embedding if any Abelian embedding $(\sigma',\gamma',\phi')$ of $\mu$
  there is an $i\in \{1,\ldots,s\}$, such that $(\sigma_i,\gamma_i,\phi_i)$ is a linear refinement of $(\sigma',\gamma',\phi')$.
\end{definition}

Using Lemma~\ref{lem:exhaust_embed}, we can construct a master embedding for $\mu$ as follows. Choose $r$ as in the lemma therein and consider all embeddings $(\sigma_i,\gamma_i,\phi_i)$
of $\mu$ into Abelian groups of size at most $r$ for $i=1,\ldots,s$; and note that $r$ and $s$ are some finite numbers depending only on the alphabet sizes.
We can thus define the master embedding as follows:
\begin{definition}\label{def:master_embed}
  For all finite alphabets $\Sigma$, $\Gamma$ and $\Phi$  take $r\in\mathbb{N}$ from Lemma~\ref{lem:exhaust_embed} and
  $k$ to be the number of Abelian groups of size at most $r$.
  For a distribution $\mu$ over $\Sigma\times \Gamma\times \Phi$, there are at most $k$ Abelian embeddings of $\mu$
  into Abelian groups of size at most $r$, and letting $(\sigma_1,\gamma_1,\phi_1),\ldots,(\sigma_s,\gamma_s,\phi_s)$
  into $(H_1,+),\ldots,(H_s,+)$ be an enumeration of all of these embeddings, we define the embedding
  \[
  \sigma_{{\sf master}}(x) = (\sigma_1(x),\ldots,\sigma_s(x)),
  ~~
  \gamma_{{\sf master}}(y) = (\gamma_1(y),\ldots,\gamma_s(y)),
  ~~
  \phi_{{\sf master}}(z) = (\phi_1(z),\ldots,\phi_s(z)),
  \]
  into $(H,+) = (\prod\limits_{i=1}^{s}H_i,+)$.
\end{definition}

In the following lemma, we prove that $(\sigma_{{\sf master}}, \gamma_{{\sf master}}, \phi_{{\sf master}})$ is a master embedding of $\mu$.
We remark though that it is not necessarily unique.
\begin{lemma}\label{lem:master_captures_all}
 We have that $(\sigma_{{\sf master}}, \gamma_{{\sf master}}, \phi_{{\sf master}})$ is a master embedding of $\mu$.
\end{lemma}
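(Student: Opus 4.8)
The plan is to obtain Lemma~\ref{lem:master_captures_all} as an essentially immediate consequence of Lemma~\ref{lem:exhaust_embed} together with the construction in Definition~\ref{def:master_embed}. First I would record that $(\sigma_{{\sf master}},\gamma_{{\sf master}},\phi_{{\sf master}})$ is at all an Abelian embedding of $\mu$ into $H=\prod_{i=1}^{s}H_i$: for every $(x,y,z)\in{\sf supp}(\mu)$ the $i$-th coordinate of $\sigma_{{\sf master}}(x)+\gamma_{{\sf master}}(y)+\phi_{{\sf master}}(z)$ equals $\sigma_i(x)+\gamma_i(y)+\phi_i(z)=0$, since each $(\sigma_i,\gamma_i,\phi_i)$ is an embedding by construction. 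It therefore remains only to verify the ``master'' property, so I would fix an arbitrary Abelian embedding $(\sigma',\gamma',\phi')$ of $\mu$, say into an Abelian group $(H',+)$, and show that one of the $(\sigma_i,\gamma_i,\phi_i)$ is a linear reduction of it.

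To do this I would invoke Lemma~\ref{lem:exhaust_embed}: since $\mu$ admits no non-trivial Abelian embedding into $(\mathbb{Z},+)$, it applies with the same constant $r$ used in Definition~\ref{def:master_embed} and produces an Abelian embedding $(\tau,\delta,\psi)$ of $\mu$ into some Abelian group $G$ with $\card{G}\leq r$ that is a linear reduction of $(\sigma',\gamma',\phi')$; concretely there are injective homomorphisms $m_1,m_2,m_3\colon G\to H'$ with $\sigma'=m_1\circ\tau$, $\gamma'=m_2\circ\delta$, $\phi'=m_3\circ\psi$. Now $(\tau,\delta,\psi)$ is itself an Abelian embedding of $\mu$ into a group of size at most $r$, hence, after identifying $G$ with the corresponding group used in the enumeration via a fixed isomorphism, it appears as one of $(\sigma_1,\gamma_1,\phi_1),\ldots,(\sigma_s,\gamma_s,\phi_s)$, because by construction this list exhausts all Abelian embeddings of $\mu$ into Abelian groups of size at most $r$. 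So there is an $i$ with $(\sigma_i,\gamma_i,\phi_i)=(\tau,\delta,\psi)$, and post-composing the $m_j$'s with the identifying isomorphism shows $(\sigma_i,\gamma_i,\phi_i)$ is a linear reduction of $(\sigma',\gamma',\phi')$ — equivalently, $(\sigma',\gamma',\phi')$ is a linear refinement of $(\sigma_i,\gamma_i,\phi_i)$ — which is exactly the condition in the definition of a master embedding.

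I do not expect any genuine obstacle: all of the real content sits in Lemma~\ref{lem:exhaust_embed}. The only two points that deserve a sentence of care are (i) that the enumeration in Definition~\ref{def:master_embed} is finite and really does contain \emph{every} embedding of $\mu$ into a group of size at most $r$ — this holds since there are finitely many Abelian groups of size at most $r$ up to isomorphism and, for each of them, only finitely many maps from the finite alphabets $\Sigma,\Gamma,\Phi$ into it; and (ii) reconciling the concrete subgroup $G\leq H'$ handed to us by Lemma~\ref{lem:exhaust_embed} with the abstract representative group appearing in the enumeration, which, as indicated above, is dealt with simply by composing the maps $m_1,m_2,m_3$ with the relevant isomorphism.
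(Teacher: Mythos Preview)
Your proposal is correct and follows essentially the same approach as the paper: invoke Lemma~\ref{lem:exhaust_embed} to reduce an arbitrary embedding to one into a group of size at most $r$, and then observe that the enumeration in Definition~\ref{def:master_embed} captures it. The paper's proof is a one-line version of what you wrote; your additional remarks (verifying that the master tuple is itself an embedding, finiteness of the enumeration, and the isomorphism bookkeeping) are sound clarifications the paper leaves implicit.
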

\begin{proof}
  By Lemma~\ref{lem:exhaust_embed}, for any Abelian embedding $(\sigma,\gamma,\phi)$ of $\mu$ there is an Abelian embedding $(\sigma',\gamma',\phi')$
  into an Abelian group of size at most $r$ such that $(\sigma',\gamma',\phi')$ is a linear refinement of $(\sigma,\gamma,\phi)$,
  and the result follows from the definition of $(\sigma_{{\sf master}}, \gamma_{{\sf master}}, \phi_{{\sf master}})$ (as it includes all such embeddings).
\end{proof}
For technical reasons, it will be convenient for us to assume that for each $(\sigma_i,\gamma_i,\phi_i)$ in the definition of the master embedding, we have
that $0$ is in the image of each one of them; this can easily be arranged by a proper affine shift. We assume henceforth
that all Abelian embeddings we are dealing with have $0$ in their image.

\subsection{The Path Trick}
In this section, we formally define the path trick from~\cite{BKMcsp1,BKMcsp2}, and recall some basic properties of it from these works. We then begin discussing
the interactions between the path trick and Abelian embeddings, and prove that, in a sense, the path trick preserves the structure of Abelian embeddings.

\subsubsection{The Definition of the Path Trick Distribution}
Suppose $\Sigma$, $\Gamma$ and $\Phi$ are finite alphabets, and $\mu$ is a distribution over $\Sigma\times \Gamma \times \Phi$. Below, we think
of $(x,y,z)\sim \mu$. The path trick of length $\ell = 2^t-1$ with respect to $x$ is a distribution $\mu_{\ell}$ over $\Sigma'\times \Gamma\times \Phi$
where $\Sigma'\subseteq \Sigma^{\ell}$, and there are several equivalent ways of defining it. Below we present the two ways we use: one of them
will be more intuitive to think about (namely, the path definition), whereas the other one will be more convenient to work with when we apply Cauchy-Schwarz.

\paragraph{The Path Definition.}
For the path definition, consider the bipartite graph $G = (\Gamma\cup \Phi, E)$ wherein the edges of the graph
are between $(y,z)$ for which there is an $x\in\Sigma$ such that $(x,y,z)\in {\sf supp}(\mu)$. The edges of the graph
are labeled by the $x$ that produced them, as well as weighted according to the distribution $\mu$; we allow for parallel edges.

With this in mind, a sample from the path trick distribution $\mu_{\ell}$ is generated as follows:
sample a starting point $y\sim \mu_y$ and then proceed by taking a random walk of length $\ell$, collecting
all the labels $x$ on edges encountered during this walk. Thus, in the end one has the labels $x$, $\vec{x} = (x_1,\ldots,x_{\ell})$ of edges
encountered, the starting point of the walk $y\in \Gamma$, and the endpoint of the walk $z\in \Phi$. The output of the process
is $(\vec{x}, y, z)$.

\paragraph{The Inductive Definition.}
For the inductive definition it is more convenient to define it for $\ell$'s that are power of $2$ rather than powers of $2$ minus $1$.
We define $\nu_2$ as the following distribution over $\Sigma'\times \Gamma'\times \Phi$ where $\Sigma'\subseteq \Sigma^2$, $\Gamma'\subseteq \Gamma^2$.
\begin{enumerate}
  \item Sample $z\sim \mu_z$.
  \item Sample $(x_1,y_1,z_1)$ and $(x_2,y_2,z_2)$ according to $\mu$ conditioned on $z_1 = z$ and $z_2 = z$.
  \item Output $(x_1,x_2)$, $(y_1,y_2)$ and $z$.
\end{enumerate}
Note that the path $y_1\rightarrow z\rightarrow y_2$ with the labels $x_1$ and $x_2$ corresponds to a path as sampled in the previous definition (of length $2$).
Once $\nu_{\ell}$ has been defined, we define $\nu_{2\ell}$ as:
\begin{enumerate}
  \item Sample $y\sim \mu_y$.
  \item Sample $(x_1,\ldots,x_{\ell}), (y_1,\ldots,y_{\ell})$ and $(z_1,\ldots,z_{\ell-1})$ according to $\nu_{\ell}$ conditioned on $y_1 = y$.
  \item Independently sample $(x_1',\ldots,x_{\ell}'), (y_1',\ldots,y_{\ell}')$ and $(z_1',\ldots,z_{\ell-1}')$ according to $\nu_{\ell}$ conditioned on $y_1' = y$.
  \item Output $(x_{\ell},x_{\ell-1},\ldots,x_1,x_1',x_2',\ldots,x_{\ell}')$, $(y_{\ell},\ldots,y_1, y_2',y_3',\ldots,y_{\ell}')$
  and $(z_{\ell-1},\ldots,z_1,z_1',\ldots,z_{\ell-1}')$.
\end{enumerate}
In words, we sample $y\sim \mu_y$, take two independent walks of length $\ell$ starting from it, and then concatenate them.
The distribution $\tilde{\mu}_{\ell}$ is naturally derived from the distribution $\nu_{\ell}$ by just picking the end-points of the $y$ and $z$,
which in the notations above are $y_{\ell}$ and $z_{\ell-1}'$.

The following lemma from~\cite{BKMcsp2} shows that the two ways of describing the path trick distribution are equivalent.
For the sake of completeness we give a sketch of the proof.
\begin{lemma}
  For all $\ell = 2^{t}-1$, it holds that $\tilde{\mu}_{\ell} = \mu_{\ell}$.
\end{lemma}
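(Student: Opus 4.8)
The plan is to prove the lemma by induction on $t$, using the fact that both $\mu_\ell$ and $\tilde\mu_\ell$ describe a random walk with recorded edge-labels in the bipartite graph $G=(\Gamma\cup\Phi,E)$ attached to $\mu$: the path definition extends a walk one edge at a time, while the inductive definition assembles a walk of doubled length by gluing two independent half-walks at a shared vertex. The whole argument hinges on one elementary telescoping computation, which I would establish first. Writing a sequential trajectory as $w_0,w_1,\ldots,w_L$ with recorded edges $e_1,\ldots,e_L\in{\sf supp}(\mu)$ (so $e_i$ joins $w_{i-1}$ and $w_i$) and starting from $w_0\sim\mu_y$, the walk uses transition probability $\mu(x,y,z)/\mu_y(y)$ to move from a $\Gamma$-vertex $y$ to $(z,x)$, and symmetrically from a $\Phi$-vertex, so the probability of the trajectory equals $\big(\prod_{i}\mu(e_i)\big)\big/\big(\prod_{i=1}^{L-1}\mu_{\tau(w_i)}(w_i)\big)$ whenever $L$ is even — here $\tau(w_i)\in\{y,z\}$ records the side of $w_i$, and the initial weight $\mu_y(w_0)$ cancels against the first transition denominator because $\tau(w_0)=y$. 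Two consequences: this expression is invariant under reversing the trajectory, so the even-length sequential walk started from $\mu_y$ is reversible; and summing out the interior vertices shows $\mu_y$ is stationary for the two-step walk, hence for every even-step walk.

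I would then dispatch the base case ($t$ minimal, the two-edge walk encoded by $\nu_2$) by a one-line Bayes computation: the sequential walk $y_1\to z\to y_2$ from $y_1\sim\mu_y$ has vertex law $\mu_{yz}(y_1,z)\mu_{yz}(y_2,z)/\mu_z(z)$ with the two $x$-labels attached by the corresponding conditionals, and $\nu_2$ — sample $z\sim\mu_z$, then two independent $\mu$-triples conditioned on their $\Phi$-coordinate being $z$ — yields precisely the same law.

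For the inductive step I would condition a length-$2\ell$ sequential walk started from $\mu_y$ on its midpoint vertex $v$. The Markov property makes the two halves conditionally independent given $v$; the second half, conditioned on $v$, is a length-$\ell$ sequential walk from $v$, hence by the inductive hypothesis a $\nu_\ell$-walk from $v$; the first half, conditioned on $v$ and then reversed, is — by the reversibility observation — again a length-$\ell$ sequential walk from $v$, hence again a $\nu_\ell$-walk from $v$; and by stationarity $v$ itself is distributed as $\mu_y$. This is exactly the $\nu_{2\ell}$ recipe, so the length-$2\ell$ sequential walk and the inductively-built walk coincide as distributions. Reading off the two designated boundary symbols together with the full label vector in the same way on both sides then upgrades this to $\tilde\mu_\ell=\mu_\ell$, closing the induction.

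The step I expect to be the crux is the reversal argument inside the induction: one must verify not merely that the halves decouple after conditioning on the shared vertex, but that each half, read in the correct direction, is an \emph{un-reversed} sequential half-walk starting at that vertex, so that the inductive hypothesis applies verbatim — this is precisely where the reversibility identity of the first paragraph is needed. The remaining work is bookkeeping: tracking which half is reversed, which boundary symbol becomes the output $\Gamma$-symbol and which the output $\Phi$-symbol, and checking the parities and index ranges so that the base and inductive steps line up with the $\ell=2^t-1$ in the statement.
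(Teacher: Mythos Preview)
Your proposal is correct and follows essentially the same approach as the paper's sketch: both argue by induction on $t$, using the midpoint decomposition together with reversibility and stationarity of the random walk to show that gluing two independent half-walks at a common $\Gamma$-vertex reproduces the sequential walk of doubled length. You supply more detail than the paper does --- in particular the explicit telescoping formula for the trajectory probability that justifies reversibility, and the careful identification of where the reversal is needed --- but the skeleton of the argument is the same.
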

\begin{proof}
  We prove by induction that the distribution $\tilde{\mu}_{\ell}$ is over $(\vec{x}, y, z)$ wherein $y\sim \mu_y$, and conditioned on
  that $\vec{x}$ and $z$ are a result of a random walk as in the path definition. Indeed, for $\ell=1$ and $\ell=2$ this is clear
  by inspection.

  Assume the statement for $\ell$ and prove for $2\ell$; note that the process above may be thought of as first sampling a
  midpoint $y$, generate from it two paths of length $\ell$ independently, concatenate them and then chop off the final
  step in the path (so as that one ends up in the $\Phi$-part as opposed to the $\Gamma$-part). Note that for every length $\ell$,
  the distribution over paths from $\Gamma$ to $\Gamma$ of length $\ell$ is invariant under path reversals, and
  that the distribution over paths of length $r_1+r_2$ is the concatenation of a distribution of a path of length $r_1$ with
  a path of length $r_2$ conditioned on the endpoint of the first path being the starting point of the second path. Thus,
  the distribution we end up with can be viewed as ${\sf reverse}(p_1)\circ p_2$ where $p_1,p_2$ are path of length $\ell$
  from $\Gamma$ to $\Gamma$ conditioned on them starting at the first point. Thus, the distribution of ${\sf reverse}(p_1)$
  and $p_2$ are of length $\ell$ conditioned on $p_2$ starting where ${\sf reverse}(p_1)$ ended.
\end{proof}

\subsubsection{The Path Trick Distribution and $3$-wise Correlations}
One important feature of the path trick distribution that was already explored in~\cite{BKMcsp1,BKMcsp2} is that it allows one to reduce upper bound
powers of the $3$-wise correlations as in Theorem~\ref{thm:main_stab_3} with respect to $\mu$ by $3$-wise correlations with respect to $\mu_{\ell}$.
Below is a formal statement.
\begin{lemma}\label{lem:from_mu_to_path}
  For all $t\geq 0$, and all $1$-bounded functions
  $f\colon \Sigma^{n}\to \mathbb{C}$,
  $g\colon \Gamma^n\to\mathbb{C}$
  and $h\colon \Phi^n\to \mathbb{C}$
  we have that
  \[
  \card{\Expect{(x,y,z)\sim \mu}{f(x)g(y)h(z)}}^{2^{t}}
  \leq
  \card{\Expect{(\vec{x},y,z)\sim \mu_{2^t}}{F(\vec{x})g(y)h'(z)}},
  \]
  where $F(\vec{x}) = f(x_{2^t-1})\prod\limits_{j=1}^{2^{t-1}-1} f(x_{2j-1})\overline{f(x_{2j})} $
  and $h'(z) = \cExpect{(x',y',z')\sim \mu}{z'=z}{\overline{f(x')}\overline{g(y')}}$.
\end{lemma}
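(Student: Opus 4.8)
The plan is to prove the inequality by induction on $t$, using the Cauchy--Schwarz inequality at each step. For $t = 0$ the claim is trivial since $\mu_1 = \mu$, $F = f$ and $h' = h$. For the inductive step, the key observation is that the path-trick distribution $\mu_{2^t}$ is built (via the inductive definition) by picking a midpoint and concatenating two independent walks, so the $3$-wise correlation over $\mu_{2^t}$ naturally factors through a conditional expectation over this midpoint. Concretely, I would first handle the base-case-like reduction from $\mu$ to $\mu_2$: starting from $c := \Expect{(x,y,z)\sim\mu}{f(x)g(y)h(z)}$, write $|c|^2 = c\cdot\overline{c}$ and express $\overline c = \Expect{(x',y',z')\sim\mu}{\overline{f(x')}\,\overline{g(y')}\,\overline{h(z')}}$. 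Grouping the two expectations by conditioning on the shared coordinate — here $z$ on one side matched with $z'$ on the other through a common $\Phi$-value — and applying Cauchy--Schwarz in the appropriate variable, one gets that $|c|^2$ is bounded by a $3$-wise correlation over $\mu_2$ of the functions $F_2(x_1,x_2) = f(x_1)\overline{f(x_2)}$, together with $g$ (at the live $\Gamma$-endpoint) and the contracted function $h'(z) = \cExpect{(x',y',z')\sim\mu}{z'=z}{\overline{f(x')}\,\overline{g(y')}}$ at the $\Phi$-endpoint. This matches the stated $F$ and $h'$ in the case $t=1$.

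For the general inductive step from $\ell = 2^{t-1}$ to $2\ell = 2^t$: assume the statement for $t-1$, so $|c|^{2^{t-1}} \le |\Expect{(\vec x, y, z)\sim\mu_{\ell}}{F_{\ell}(\vec x) g(y) h'(z)}|$ where $F_\ell$ and $h'$ are as in the statement (with $\ell$ in place of $2^t$). Squaring, $|c|^{2^t} \le |\Expect{\mu_\ell}{F_\ell g h'}|^2$; now apply Cauchy--Schwarz by conditioning on the $\Gamma$-endpoint $y$ of the length-$\ell$ walk. Writing the squared expectation as an average over $y\sim\mu_y$ of $|\cExpect{\mu_\ell}{y_{\mathrm{end}} = y}{F_\ell(\vec x) h'(z)}|^2$ and expanding the modulus squared as a product of the conditional expectation with its conjugate, we obtain an expectation over two independent length-$\ell$ walks sharing the midpoint $y$. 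Concatenating these two walks (reversing one of them, which is legitimate since the walk distribution is reversal-invariant — exactly the fact used in the equivalence lemma above) produces a single length-$2\ell$ walk, i.e.\ a sample from $\mu_{2\ell}$, with $\Gamma$-endpoint and $\Phi$-endpoint given by the two free ends. The product $F_\ell(\vec x)\,\overline{F_\ell(\vec x')}$ along the reversed-concatenated walk telescopes precisely into $F(\vec x'') = f(x_{2^t-1})\prod_{j=1}^{2^{t-1}-1} f(x_{2j-1})\overline{f(x_{2j})}$, because $\overline{h'}$ from one copy pairs with $h'$ from the other through a $g$-factor at the midpoint (one needs to carefully track that the midpoint $g(y)$-contribution and the $h'$-contributions reassemble correctly; the $h'$ at the new $\Phi$-endpoint survives from one of the two copies, and the $g$ at the new $\Gamma$-endpoint from the other). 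The signs (which $f$'s appear conjugated) are dictated by the parity of the position along the concatenated walk, matching the definition of $F$.

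The main obstacle I anticipate is purely bookkeeping: tracking exactly how the functions $g$, $h'$, and the $\overline f$/$f$ factors redistribute when two walks are reversed and concatenated, and checking that the telescoping of $F_\ell \cdot \overline{F_\ell}$ yields $F$ with the stated conjugation pattern. There is no serious analytic difficulty — every inequality is a single application of Cauchy--Schwarz with respect to a conditioning variable, and the distribution identity is the walk-reversal/concatenation fact already established. One should also double-check the base reduction ($t = 0 \to t = 1$) separately, since there the contraction producing $h'$ happens, whereas in later steps $h'$ is simply carried along; but this is consistent with the stated formulas. I would organize the write-up so that the $t=1$ case and the inductive step $\ell \to 2\ell$ share as much notation as possible, and relegate the telescoping identity to a short displayed computation.
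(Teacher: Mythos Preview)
Your overall strategy—iterated Cauchy–Schwarz, doubling the path at each step—is the right one and matches the paper in spirit. But there is a concrete gap in your inductive step, and the base case is not as you describe.

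First, the claim that $t=0$ is trivial ``since $h'=h$'' is wrong: $h'$ is defined as $h'(z)=\cExpect{(x',y',z')\sim\mu}{z'=z}{\overline{f(x')}\,\overline{g(y')}}$, which has nothing to do with $h$. The paper's first step is exactly to pass from $h$ to $h'$: one writes $\E_\mu[fgh]=\langle h,\overline{h'}\rangle_{\mu_z}$ and applies Cauchy–Schwarz together with $\|h\|_2\le 1$ to obtain $|\E_\mu[fgh]|^2\le \|h'\|_2^2=\E_\mu[fgh']$. This is the $t=1$ case, and it is not a triviality.

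More seriously, your inductive step does not close. Conditioning on the $\Gamma$-endpoint $y$ and applying Cauchy–Schwarz to $|\E_{\mu_\ell}[F_\ell\, g\, h']|^2$ gives
\[
\bigl|\E_{\mu_\ell}[F_\ell\, g\, h']\bigr|^2 \;\le\; \Expect{y}{\,\bigl|\cExpect{\mu_\ell}{y}{F_\ell(\vec x)\,h'(z)}\bigr|^2\,},
\]
which expands to an expectation over two independent length-$\ell$ walks sharing the vertex $y\in\Gamma$, with integrand $F_\ell(\vec x)\,\overline{F_\ell(\vec x')}\,h'(z)\,\overline{h'(z')}$. Both free ends $z,z'$ lie in $\Phi$, and the $g$-factor has been \emph{eliminated} by Cauchy–Schwarz—it does not survive at the midpoint or anywhere else. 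So the result is not of the form $\E_{\mu_{2\ell}}[F_{2\ell}\,g\,h']$: there is no $\Gamma$-endpoint carrying $g$, and there is a spare $\overline{h'(z')}$ you have not accounted for. Your sentence about ``the $g$ at the new $\Gamma$-endpoint [surviving] from the other'' copy cannot be right, since neither copy carries a $g$ after the Cauchy–Schwarz step.

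The paper organizes the iteration differently: after the initial step producing $h'$, it applies one more Cauchy–Schwarz on $z$ to reach an expression $\E_{\nu_2}[f(x_1)\overline{f(x_2)}\,g(y_1)\overline{g(y_2)}]$ over a walk between two $\Gamma$-endpoints, and then iterates Cauchy–Schwarz always on the $\overline g$-endpoint, keeping the walk $\Gamma$-to-$\Gamma$ throughout. Only at the very end does it recover $h'$, by recognizing that the last edge of the $\nu_{2^t}$-walk contributes exactly the conditional expectation defining $h'$ and chopping it off.

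If you prefer to keep your inductive packaging, the fix is to expand $\overline{h'(z')}=\cExpect{\mu}{z'}{f(x'')g(y'')}$ \emph{after} the Cauchy–Schwarz step: this extends the concatenated walk by one more edge from $z'$ to a fresh $y''\in\Gamma$, yielding a walk of odd length $2\ell+1$ with $h'$ at the $\Phi$-end, $g$ at the new $\Gamma$-end $y''$, and one extra $f$-factor absorbed into $F$. With that extension the conjugation pattern of $F$ does telescope correctly, and the remaining work is genuinely just bookkeeping.
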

\begin{proof}
  First, we note that
  \[
  \card{\Expect{(x,y,z)\sim \mu}{f(x)g(y)h(z)}}^{2}
  =\card{\Expect{z}{h(z)\overline{h'}(z)}}^2
  \leq \norm{h}_2^2\norm{h'}_2^2
  \leq \norm{h'}_2^2
  =\card{\Expect{(x,y,z)\sim \mu}{f(x)g(y)h'(z)}},
  \]
  which proves the statement for $t=1$, in which case the function $F$ is just $f$.
  We now proceed by a sequence of Cauchy-Schwarz inequalities. To illustrate, note that
  \begin{align*}
  \card{\Expect{(x,y,z)\sim \mu}{f(x)g(y)h'(z)}}^2
  &=\card{\Expect{z'}{h'(z')\cExpect{(x,y,z)\sim \mu}{z=z'}{f(x)g(y)}}}^2\\
  &\leq \Expect{(x_1,x_2), (y_1,y_2), z \sim \nu_2}{f(x_1)\overline{f(x_2)} g(y_1) \overline{g(y_2)}}.
  \end{align*}
  Thus,
  \begin{align*}
  \card{\Expect{(x,y,z)\sim \mu}{f(x)g(y)h'(z)}}^4
  &\leq \Expect{(x_1,x_2), (y_1,y_2), z \sim \nu_2}{f(x_1)\overline{f(x_2)} g(y_1) \overline{g(y_2)}}^2\\
  &=\Expect{y_2'\sim \mu_y}{\overline{g(y_2)}\cExpect{(x_1,x_2), (y_1,y_2), z \sim \nu_2}{y_2 = y_2'}{f(x_1)\overline{f(x_2)} g(y_1)}}^2\\
  &\leq \Expect{y_2'\sim \mu_y}{\card{\cExpect{(x_1,x_2), (y_1,y_2), z \sim \nu_2}{y_2 = y_2'}{f(x_1)\overline{f(x_2)} g(y_1)}}^2}\\
  &=\Expect{\vec{x},\vec{y}\sim \nu_4}{F(x_1,x_2,x_3,x_4) g(y_1)\overline{g(y_3)}},
  \end{align*}
  and inductively we get that
  \[
  \card{\Expect{(x,y,z)\sim \mu}{f(x)g(y)h'(z)}}^{2^{t}}
  \leq
  \Expect{\vec{x},\vec{y}, \vec{z}\sim \nu_{2^t}}{F'(\vec{x}) g(y_1)\overline{g(y_{2^{t-1}+1})}}
  \]
  where $F'(\vec{x}) = \prod\limits_{j=1}^{2^{t-1}} f(x_{2j-1})\overline{f(x_{2j})}$ is given as in the statement of the lemma. To finish the proof, note that
  \begin{align*}
  \Expect{\vec{x},\vec{y}, \vec{z}\sim \nu_{2^t}}{F'(\vec{x}) g(y_1)\overline{g(y_{2^t})}}
  &=
  \Expect{\vec{x},\vec{y}, \vec{z}\sim \nu_{2^{t}}}{F(\vec{x}) g(y_1)\cExpect{x_{2^t}, y_{2^{t-1}}}{z_{2^{t-1}}}{\overline{f(x_{2^t})}\overline{g(y_{2^{t-1}+1})}}}\\
  &=
  \Expect{\vec{x},y, z\sim \mu_{2^{t}}}{F(\vec{x}) g(y_1)h'(z_{2^{t-1}})},
  \end{align*}
  where in the last transition we used the definition of $h'$.
\end{proof}

Some remarks regarding Lemma~\ref{lem:from_mu_to_path} that are essential to make an effective use of it, are in order. Lemma~\ref{lem:from_mu_to_path}
allows us to relate upper bound correlations with respect to $\mu$ with correlations with respect to $\mu_{\ell}$, but this comes at the expense of several complications:
\begin{enumerate}
  \item \textbf{The role of $y$ and $z$.}
  Note that in the above lemma, the role played by the $y$ and $z$ function in the premise are interchangeable, but this is not the case in the
  conclusion of the lemma. Namely, in the above formulation we kept the $y$ function to be the same but replaced the $z$-function from $h$ to
  $h'$. It is possible however, and this will be important for us, that we might as well have kept the $h$ function in place and changed the function
  $g$ into a function $g'$ analogously to the above.
  \item \textbf{Applying the path-trick with respect to other directions.} We could also apply the path trick with respect to $y$ (or $z$) instead
  of on $x$ as in the above formulation, in which case one gets an analogous statement to the above. It is always the case that the variable we apply
  the path trick on results in the function on that variable becoming a more complicated ``product version'' of the previous function.
  As for the other two functions, one of them stays put, whereas on the last one we have essentially no control
  over and it may change altogether. As explained above, there is flexibility for us in the choice of which function stays and on which we lose control
  over, and we will utilize this.
  \item
  \textbf{The relationship between $f$ and $F$.}
  The function on $x$, $f$, becomes a more complicated function $F$. In our argument for the proof of Theorem~\ref{thm:main_stab_3} we will
  make use of this lemma several times, and on each invocation the $x$-function will get more and more complicated, until eventually we will reach
  a distribution $\mu'$ on which we will do a direct analysis. At that point, we will be able to conclude a structural result on the $x$-function
  in that correlation, and from that deduce a structural result regarding $f$. In that respect, it is important to keep in mind the relationship
  between the $x$-functions; at each invocation of the path trick we will either keep the $x$ function the same (this is the role played by
  the $y$-function in the formulation above), or else it will be a multiplication of the previous function on several inputs that are correlated
  in some way.
\end{enumerate}

\subsubsection{The Structure of Embeddings after the Path Trick}
In~\cite{BKMcsp1,BKMcsp2} it is shown that if $\mu$ does not admit an Abelian embedding, then the path trick distribution $\mu_{\ell}$ also does not
admit any Abelian embedding. In this section, we extend this connection by observing that, in a sense, even in the presence of Abelian embeddings,
the path trick distribution has the ``same'' Abelian embeddings as the original distribution. More precisely:
\begin{lemma}\label{lem:reverse_embed}
  Let $\Sigma$, $\Gamma$ and $\Phi$ be finite alphabets, let $\mu$ be a distribution over $\Sigma\times \Gamma\times \Phi$, let $\ell\in \mathbb{N}$ be odd
  and let $\mu_{\ell}$ be the path trick distribution applied on $\mu$ with respect to $x$, which is a distribution over $\Sigma'\times \Gamma\times \Phi$
  where $\Sigma'\subseteq \Sigma^{\ell}$.
  If $\sigma\colon \Sigma'\to (H,+)$, $\gamma\colon \Gamma\to (H,+)$ and $\phi\colon \Phi\to (H,+)$ is an Abelian embedding
  of $\mu_{\ell}$ in $(H,+)$, then there exists an $\sigma'\colon \Sigma\to (H,+)$ such that
  \begin{enumerate}
    \item $(\sigma', \gamma,\phi)$ form an embedding of $\mu$ in $(H,+)$.
    \item For all $(x_1,\ldots,x_{\ell})\in \Sigma'$ we have that
    \[
    \sigma(x_1,\ldots,x_{\ell})
    =\sum\limits_{j=1}^{\ell}(-1)^{j-1}\sigma'(x_j).
    \]
  \end{enumerate}
\end{lemma}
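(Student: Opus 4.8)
The plan is to reconstruct $\sigma'$ on the individual symbols of $\Sigma$ by reading off values of $\sigma$ on cleverly chosen elements of $\Sigma'$, and then to verify both claimed properties using the path structure together with the embedding identities for $\mu_\ell$. Since $\ell$ is odd, write $\ell = 2s+1$. The key observation is that a single edge $(x,y,z)\in\mathrm{supp}(\mu)$ can be "inflated" into a length-$\ell$ walk in the bipartite graph $G=(\Gamma\cup\Phi,E)$: traverse the edge labeled $x$, then bounce back and forth along it (or along adjacent edges) $s$ times. Concretely, given $(x,y,z)\in\mathrm{supp}(\mu)$, the tuple $(x,x,x,\ldots,x)\in\Sigma^\ell$ obtained by walking back and forth on the single edge $(y,z)$ with label $x$ lies in $\Sigma'$, and the corresponding path starts at $y$ and ends at $z$ (using that $\ell$ is odd). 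Applying the $\mu_\ell$-embedding identity to this element gives $\sigma(x,\ldots,x) + \gamma(y) + \phi(z) = 0$; but by the alternating-sum structure we will establish for $\sigma$, we have $\sigma(x,\ldots,x) = \sum_{j=1}^\ell (-1)^{j-1}\sigma(x\text{-component}) $, which telescopes to a single "$\sigma$ of $x$" term. This suggests \emph{defining} $\sigma'(x)$ to be exactly this value, i.e.\ $\sigma'(x) := \sigma(x,x,\ldots,x)$ where the tuple comes from a back-and-forth walk on some edge carrying label $x$.

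The first technical point is \textbf{well-definedness}: the value $\sigma(x,\ldots,x)$ must not depend on which edge $(y,z)$ carrying label $x$ we picked, nor on the walk. To see this, suppose $(x,y,z),(x,y',z')\in\mathrm{supp}(\mu)$ both carry label $x$. I would use the pairwise connectivity of $\mathrm{supp}(\mu)$ — specifically $\{2,3\}$-connectedness — to build a single element of $\Sigma'$ that "compares" the two back-and-forth walks: go from $y'$ to $z$ along some path, traverse the $x$-edge at $(y,z)$, come back to $y'$, and arrange the parities so the net effect isolates $\sigma(x,\ldots,x)$ computed via $(y,z)$ versus via $(y',z')$. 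Each such element, fed into the $\mu_\ell$-embedding identity, yields a linear relation; combining a few of them forces the two candidate values of $\sigma'(x)$ to agree. (Alternatively, and more cleanly: since any element of $\Sigma'$ is a walk, prove directly that $\sigma$ of a walk depends only on its edge-label sequence via the alternating sum, by induction on walk length, using that two walks differing by an "immediate backtrack" $\cdots y \to z \to y \cdots$ map to $\Sigma'$-elements whose $\sigma$-values differ by $\sigma'(x) - \sigma'(x) = 0$.)

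The second point is \textbf{proving the two claimed properties}. For property (2), the alternating-sum formula $\sigma(x_1,\ldots,x_\ell) = \sum_{j=1}^\ell (-1)^{j-1}\sigma'(x_j)$: I would prove this by induction on $\ell$ (through odd values, or by a two-step induction through all lengths), peeling off one step of the walk at a time. The inductive step writes a length-$\ell$ walk as a length-$(\ell-2)$ walk followed by a backtrack-like segment, or compares a length-$\ell$ walk with the length-$\ell$ walk that repeats its last edge; in each case the difference of the two $\Sigma'$-elements, evaluated through the $\mu_\ell$-embedding equation (with $\gamma,\phi$ held fixed), produces exactly the alternating contribution of one more label. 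For property (1), that $(\sigma',\gamma,\phi)$ embeds $\mu$: take any $(x,y,z)\in\mathrm{supp}(\mu)$, form the back-and-forth element $\vec{x}=(x,\ldots,x)\in\Sigma'$ on the edge $(y,z)$, note its path starts at $y$ and ends at $z$, and apply the $\mu_\ell$-embedding identity: $0 = \sigma(\vec{x}) + \gamma(y) + \phi(z) = \big(\sum_{j=1}^\ell(-1)^{j-1}\big)\sigma'(x) + \gamma(y)+\phi(z) = \sigma'(x)+\gamma(y)+\phi(z)$, using that $\ell$ is odd so the alternating sum of $\ell$ ones equals $1$. I expect the main obstacle to be the well-definedness argument — getting the parities and the connecting paths to line up so that the comparison element genuinely lands in $\Sigma'$ (the set of \emph{feasible} label-tuples) and so that the resulting linear relations pin down $\sigma'(x)$ uniquely; everything after that is a routine telescoping induction.
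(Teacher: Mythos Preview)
Your proposal arrives at the same definition $\sigma'(x) := \sigma(x,x,\ldots,x)$ as the paper, and your proof of property~(1) is exactly right. However, your well-definedness concern is entirely spurious: $\sigma$ is a \emph{given} function $\Sigma' \to H$, and $(x,\ldots,x)$ is a single element of $\Sigma'$ (it lies in $\Sigma'$ whenever $x$ appears in some atom of $\mu$, via the back-and-forth walk). The value $\sigma(x,\ldots,x)$ is just evaluating this function at this point---it cannot possibly depend on which edge or walk you used to witness membership in $\Sigma'$, because the tuple itself does not record that information. Your suggested fix via pairwise connectivity is also misplaced, since the lemma does not assume pairwise connectivity.

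For property~(2), your inductive sketch is vague, and the phrase ``induction on $\ell$'' is confusing since $\ell$ is fixed (the embedding is only given for $\mu_\ell$, not for smaller path lengths). The paper's argument is more direct: given $\vec{x} = (x_1,\ldots,x_\ell)\in\Sigma'$ with underlying walk $y_1\to z_1\to y_2\to\cdots\to z_{(\ell+1)/2}$, apply the $\mu_\ell$-embedding identity to the constant tuples $(x_{2j-1},\ldots,x_{2j-1})$ and $(x_{2j},\ldots,x_{2j})$, which lie in $\Sigma'$ with endpoints $(y_j,z_j)$ and $(y_{j+1},z_j)$ respectively. This yields $\sigma'(x_{2j-1}) + \gamma(y_j) + \phi(z_j) = 0$ and $\sigma'(x_{2j}) + \gamma(y_{j+1}) + \phi(z_j) = 0$. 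Taking the alternating sum of all $\ell$ such equations, the $\gamma$ and $\phi$ terms telescope, leaving $\sum_{j=1}^\ell (-1)^{j-1}\sigma'(x_j) + \gamma(y_1) + \phi(z_{(\ell+1)/2}) = 0$; comparing with $\sigma(\vec{x}) + \gamma(y_1) + \phi(z_{(\ell+1)/2}) = 0$ gives the formula in one stroke. Your ``peeling'' idea can be made to work (by padding partial walks with back-and-forths so they remain length-$\ell$ elements of $\Sigma'$), but the telescoping formulation is cleaner and avoids any inductive setup.
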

\begin{proof}
  Consider a sample $(\vec{x}, \vec{y}, \vec{z})$ as in $\nu_{\ell}$; that is, we consider a sample according to $\mu_{\ell}$
  but record all the $y$'s and $z$'s generated in the process, so that
  $(x_{2j-1}, y_j, z_j)$ and $(x_{2j}, y_{j+1}, z_j)$ are in ${\sf supp}(\mu)$ for all $j$.

  Let $v_j\in \Sigma^{\ell}$ be the vector whose coordinates are all equal to $x_{2j-1}$,
  and $u_j\in \Sigma^{\ell}$ be the vector whose coordinates are all $x_{2j}$.
  Note that as $(v_j,y_j,z_j)\in {\sf supp}(\mu_{\ell})$ (as there is a path from $y_j$ to $z_j$ of length $\ell$ that just goes
  back and forth on the edge labeled by $x_{2j-1}$) and $(u_j, y_{j+1}, z_j)\in {\sf supp}(\mu_{\ell})$ (for similar reasons),
  it follows that
  \begin{equation}\label{eq:3}
  \sigma(v_j) + \gamma(y_j) + \phi(z_j) = 0,
  \qquad
  \sigma(u_j) + \gamma(y_{j+1}) + \phi(z_j) = 0
  \end{equation}
  for all $j$. Denote $\sigma'(x) = \sigma(x,x,\ldots,x)$, and note that $(\sigma',\gamma,\phi)$ form an Abelian embedding of $\mu$ into $(H,+)$.
  In the rest of the proof, we show that the formula in the second item of the statement holds.
  For that, we look at~\eqref{eq:3} for all $j = 1,\ldots,(\ell+1)/2$, multiply the left equations by $1$ and the second one by $-1$ and add up to get that
  \[
  \sum\limits_{j=1}^{\ell}(-1)^{j-1}\sigma'(x_j)
  +\sigma(y) + \sigma(z) = 0,
  \]
  where $y = y_1$ is the starting point of the walk and $z = z_{(\ell+1)/2}$ is the end point of the walk.
  We also have that $\sigma(\vec{x}) + \gamma(y) + \phi(z) = 0$ as $(\sigma,\gamma,\phi)$ form an embedding of
  $\mu_{\ell}$, so we get that
  \[
  \sigma(\vec{x})
  =\sum\limits_{j=1}^{\ell}(-1)^{j-1}\sigma'(x_j),
  \]
  finishing the proof.
\end{proof}

Lemma~\ref{lem:reverse_embed} has a few important consequences.
Later on in Section~\ref{sec:unraveling}, we will use it in order to convert structural results for $F$ to structural results for $f$.
\footnote{Indeed, the above formula suggests that if $F$ is correlated with an embedding function on $\mu_{\ell}$,
then the embedding function can be broken to a product function over the $x_i$'s just the same way as $F$ can be, and one thus expects to be able to argue
that $f$ itself is correlated with an embedding function.
}For now though, we shall use Lemma~\ref{lem:reverse_embed} in a different way in order to achieve the ``master embedding is saturated'' property as explained earlier.
Towards this end, we introduce the following convenient notation.
\begin{definition}\label{def:master}
  For $\sigma\colon \Sigma\to (H,+)$, we define $\sigma^{\sharp\ell}\colon \Sigma^{\ell}\to(H,+)$ by
  \[
  \sigma^{\sharp\ell}(x_1,\ldots,x_{\ell}) = \sum\limits_{j=1}^{\ell}(-1)^{j+1}\sigma(x_j).
  \]
\end{definition}
Often times, $\sigma$ will be part of some embedding of $\mu$, in which case it will be more natural to view the domain of $\sigma^{\sharp\ell}$
as $\Sigma'\subseteq \Sigma^{\ell}$ where $\Sigma'$ is the support of the first coordinate of $\mu_{\ell}$.

\subsection{Saturating the Master Embedding via the Path Trick}
Recalling the definition of the master embedding from Definition~\ref{def:master_embed}, there is no reason that its image would be the entire group $(H,+)$
(and in fact, typically it would not be). The goal of this section is to use the path trick to move from the distribution $\mu$ to a related distribution
$\mu'$ such that the master embedding in $\mu'$ has as image a group. We moreover assert that it ``suffices'' to prove Theorem~\ref{thm:main_stab_3} for $\mu'$,
in the sense that then we would be able to deduce it for $\mu$.
For the sake of this section, we will focus on the first point -- namely that the master embeddings have full images, and
ignore the second point for now; this last deduction is covered in Section~\ref{sec:unraveling}.

\subsubsection{The Evolution of the Master Embeddings under Path Tricks}
The following lemma explains the way the master embeddings of a distribution evolve after an application of the path trick.
\begin{lemma}\label{lem:master_stays}
  Let $\Sigma$, $\Gamma$ and $\Phi$ be finite alphabets, let $\mu$ be a distribution over $\Sigma\times \Gamma\times \Phi$, let $\ell\in \mathbb{N}$ be odd
  and let $\mu_{\ell}$ be the path trick distribution applied on $\mu$ with respect to $x$, which is a distribution over $\Sigma'\times \Gamma\times \Phi$
  where $\Sigma'\subseteq \Sigma^{\ell}$.

  If $\sigma_{{\sf master}},\gamma_{{\sf master}},\phi_{{\sf master}}$ is a master embedding with respect to $\mu$, then $\sigma_{{\sf master}}^{\sharp\ell}$, $\gamma_{{\sf master}}$, $\phi_{{\sf master}}$ is a master embedding for $\mu_{\ell}$.
\end{lemma}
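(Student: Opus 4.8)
The plan is to read the statement off Lemma~\ref{lem:reverse_embed} together with the master embedding property of $\mu$, the only genuinely new ingredient being the elementary observation that the operation $\sigma\mapsto\sigma^{\sharp\ell}$ of Definition~\ref{def:master} commutes with postcomposition by homomorphisms: for any linear map $m\colon H'\to K$ of finite Abelian groups one has $(m\circ\sigma)^{\sharp\ell}=m\circ\sigma^{\sharp\ell}$, since both sides equal $\sum_{j=1}^{\ell}(-1)^{j+1}m(\sigma(x_j))$. Writing $(H,+)=\prod_{i=1}^{s}(H_i,+)$ and $\sigma_{{\sf master}}=(\sigma_1,\ldots,\sigma_s)$, $\gamma_{{\sf master}}=(\gamma_1,\ldots,\gamma_s)$, $\phi_{{\sf master}}=(\phi_1,\ldots,\phi_s)$, applying this coordinatewise gives $\sigma_{{\sf master}}^{\sharp\ell}=(\sigma_1^{\sharp\ell},\ldots,\sigma_s^{\sharp\ell})$; that is, the $i$-th component of $\sigma_{{\sf master}}^{\sharp\ell}$ is $\sigma_i^{\sharp\ell}$, and the $\Gamma$- and $\Phi$-parts are unchanged, so the alignment of components with $\gamma_{{\sf master}},\phi_{{\sf master}}$ is maintained.

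First I would check that $(\sigma_{{\sf master}}^{\sharp\ell},\gamma_{{\sf master}},\phi_{{\sf master}})$ is itself an Abelian embedding of $\mu_{\ell}$. This is the ``lifting'' direction of the path trick: given an embedding $(\sigma,\gamma,\phi)$ of $\mu$, summing the relations $\sigma(x)+\gamma(y)+\phi(z)=0$ along the random walk defining $\mu_{\ell}$ with alternating signs telescopes the intermediate $\Gamma$- and $\Phi$-labels and leaves $\sigma^{\sharp\ell}(\vec x)+\gamma(y)+\phi(z)=0$ at the endpoints; this is exactly the computation carried out inside the proof of Lemma~\ref{lem:reverse_embed} (see also the discussion around~\eqref{eq:intro_lift_embed}). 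Applying it to each triple $(\sigma_i,\gamma_i,\phi_i)$ and recombining coordinates shows $(\sigma_{{\sf master}}^{\sharp\ell},\gamma_{{\sf master}},\phi_{{\sf master}})$ embeds $\mu_{\ell}$ into $H$.

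The main step is to verify that this embedding captures every Abelian embedding of $\mu_{\ell}$. Let $(\tau,\gamma',\phi')$ be an arbitrary Abelian embedding of $\mu_{\ell}$ into some finite Abelian group $(K,+)$, where $\tau\colon\Sigma'\to K$. Since $\ell$ is odd, Lemma~\ref{lem:reverse_embed} produces $\sigma'\colon\Sigma\to K$ such that $(\sigma',\gamma',\phi')$ is an Abelian embedding of $\mu$ and $\tau=\sigma'^{\sharp\ell}$ on $\Sigma'$ (the sign convention $(-1)^{j-1}=(-1)^{j+1}$ matches Definition~\ref{def:master}). Because $(\sigma_{{\sf master}},\gamma_{{\sf master}},\phi_{{\sf master}})$ is a master embedding of $\mu$, there is an index $i$ and injective linear maps $m_1,m_2,m_3\colon H_i\to K$ (with $H_i$ a subgroup of $K$) witnessing that $(\sigma_i,\gamma_i,\phi_i)$ is a linear refinement of $(\sigma',\gamma',\phi')$, i.e.\ $\sigma'=m_1\circ\sigma_i$, $\gamma'=m_2\circ\gamma_i$, $\phi'=m_3\circ\phi_i$. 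By the commutation fact, $\tau=\sigma'^{\sharp\ell}=(m_1\circ\sigma_i)^{\sharp\ell}=m_1\circ\sigma_i^{\sharp\ell}$, and $\sigma_i^{\sharp\ell}$ is the $i$-th component of $\sigma_{{\sf master}}^{\sharp\ell}$. Hence the very same maps $m_1,m_2,m_3$ witness that $(\sigma_i^{\sharp\ell},\gamma_i,\phi_i)$ --- the $i$-th component of $(\sigma_{{\sf master}}^{\sharp\ell},\gamma_{{\sf master}},\phi_{{\sf master}})$ --- is a linear refinement of $(\tau,\gamma',\phi')$, which is exactly the master property for $\mu_{\ell}$.

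I do not anticipate a genuine obstacle here: once Lemma~\ref{lem:reverse_embed} is in hand the argument is essentially bookkeeping. The only points requiring care are matching the product structure $\sigma_{{\sf master}}^{\sharp\ell}=(\sigma_i^{\sharp\ell})_i$ against the components of the master embedding, and checking that the subgroup and injectivity conditions in the definition of ``linear refinement'' transfer unchanged --- they do, since the groups $H_i\le K$ and the maps $m_1,m_2,m_3$ are reused verbatim --- together with the degenerate case of the trivial embedding, which is captured trivially and can be set aside. Whatever delicacy the argument has is already absorbed into the proof of Lemma~\ref{lem:reverse_embed}, namely the sign bookkeeping along reversed sub-walks of the path trick.
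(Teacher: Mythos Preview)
Your proposal is correct and follows essentially the same approach as the paper's proof: apply Lemma~\ref{lem:reverse_embed} to pull an arbitrary embedding of $\mu_{\ell}$ back to an embedding of $\mu$, invoke the master property of $\mu$ to find a refining coordinate, and then lift via the observation that $(\cdot)^{\sharp\ell}$ commutes with linear maps. The paper's proof is a terse four-line version of exactly this argument.
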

\begin{proof}
Let $(\sigma',\gamma',\phi')$ be an Abelian embedding of $\mu_{\ell}$. By Lemma~\ref{lem:reverse_embed} we can find
an embedding $(\sigma'',\gamma',\phi')$ of $\mu$ where $\sigma' = \sigma''^{\sharp \ell}$ on the domain of $\sigma'$.
By Lemma~\ref{lem:master_captures_all}, there is an $i$ such that
$(\sigma_{{\sf master},i},\gamma_{{\sf master},i},\phi_{{\sf master},i})$ is a linear refinement of
$(\sigma'',\gamma',\phi')$, and it follows that $(\sigma_{{\sf master},i}^{\sharp \ell},\gamma_{{\sf master},i},\phi_{{\sf master},i})$
is a linear refinement of $(\sigma''^{\sharp \ell},\gamma',\phi') = (\sigma',\gamma',\phi')$, as required.
\end{proof}
In words, Lemma~\ref{lem:master_stays} says that once we have a master embedding for a distribution and we apply the natural operations on it,
then it is also a master embedding for $\mu_{\ell}$. Hence, we are not losing anything with respect to the master embedding while performing
path tricks.

\subsubsection{The Path Trick Preserves Pairwise Connectedness}
The following simple lemma shows that path tricks also preserve pairwise connectedness, and in fact improve pairwise connectedness with respect to
$2$ of the coordinates.
\begin{lemma}\label{lem:path_keeps_connectedness}
  Let $\Sigma$, $\Gamma$ and $\Phi$ be finite alphabets, let $\mu$ be a distribution over $\Sigma\times \Gamma\times \Phi$, let $\ell\in \mathbb{N}$ be odd
  and let $\mu_{\ell}$ be the path trick distribution applied on $\mu$ with respect to $x$, which is a distribution over $\Sigma'\times \Gamma\times \Phi$
  where $\Sigma'\subseteq \Sigma^{\ell}$.

  If $\mu$ is pairwise connected, then $\mu_{\ell}$ is pairwise connected. Furthermore, for large enough $\ell$ depending only on
  the alphabet sizes, the graph between $\Gamma$ and $\Phi$ becomes complete.
\end{lemma}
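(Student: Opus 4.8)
The plan is to verify, one pair at a time, that the three bipartite graphs defining pairwise connectedness of $\mu_{\ell}$ are connected, working throughout with the walk description of $\mu_{\ell}$: here $\Sigma'$ is the set of label sequences of length-$\ell$ walks in the bipartite graph $G=(\Gamma\cup\Phi,E)$ whose edge $(y,z)$ carries label $x$ exactly when $(x,y,z)\in{\sf supp}(\mu)$, and $(\vec{x},y,z)\in{\sf supp}(\mu_{\ell})$ precisely when there is a length-$\ell$ walk from $y$ to $z$ in $G$ with label sequence $\vec{x}$. The one observation driving everything is that for any edge $(x_0,y,z)\in{\sf supp}(\mu)$ and any odd $\ell$, the walk $y,z,y,z,\ldots,z$ shuttling back and forth along this single edge has the constant label sequence $(x_0,\ldots,x_0)$, so $\bigl((x_0,\ldots,x_0),y,z\bigr)\in{\sf supp}(\mu_{\ell})$. (We also use that every symbol of $\Sigma,\Gamma,\Phi$ occurs in ${\sf supp}(\mu)$, as otherwise one of the graphs witnessing pairwise connectedness would have an isolated vertex.)

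First I would dispatch the pair $\{2,3\}$, the graph between $\Gamma$ and $\Phi$: by the back-and-forth observation this graph contains a copy of $G$ on the same vertex set, so it is connected because $G$ is (this uses $\{2,3\}$-connectedness of $\mu$). For the ``furthermore'' part, $G$ is connected and bipartite with sides $\Gamma,\Phi$, so for every $y\in\Gamma,z\in\Phi$ the distance $d_G(y,z)$ is odd and at most $\mathrm{diam}(G)\leq\card{\Gamma}+\card{\Phi}$; given any odd $\ell$ above this bound, a shortest $y$--$z$ path padded with $(\ell-d_G(y,z))/2$ back-and-forth traversals of its last edge (legitimate, as $\ell-d_G(y,z)$ is even) is a length-$\ell$ walk from $y$ to $z$. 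Hence the $\Gamma$--$\Phi$ graph of $\mu_{\ell}$ is complete bipartite once $\ell$ exceeds $\card{\Gamma}+\card{\Phi}$, a threshold depending only on the alphabet sizes.

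Next I would treat the pair $\{1,2\}$, the graph $H$ between $\Sigma'$ and $\Gamma$ (the pair $\{1,3\}$ is identical with $\Phi$ in place of $\Gamma$ and $\{1,3\}$-connectedness of $\mu$ in place of $\{1,2\}$-connectedness). For $x_0\in\Sigma$ set $A_{x_0}=\{y\in\Gamma:(x_0,y,z)\in{\sf supp}(\mu)\text{ for some }z\}$; the back-and-forth observation says exactly that in $H$ the neighbourhood of the vertex $(x_0,\ldots,x_0)\in\Sigma'$ equals $A_{x_0}$, so each $A_{x_0}$ lies inside one connected component of $H$. Since the $A_{x_0}$ cover $\Gamma$ and every $\vec{x}\in\Sigma'$ (being a realizable walk, which starts in $\Gamma$) is adjacent in $H$ to some $y\in\Gamma$, it suffices to show the hypergraph $(\Gamma,\{A_{x_0}\}_{x_0})$ is connected. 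But a partition $\Gamma=\Gamma_1\sqcup\Gamma_2$ into nonempty parts each of which is a union of whole $A_{x_0}$'s would, on setting $\Sigma_i=\{x_0:A_{x_0}\subseteq\Gamma_i\}$, split the bipartite graph $G_{\Sigma\Gamma}$ of $\mu$ (the one with edges $(x_0,y)$ whenever $y\in A_{x_0}$) into the two nonempty pieces $\Sigma_1\cup\Gamma_1$ and $\Sigma_2\cup\Gamma_2$ with no edges between them, contradicting $\{1,2\}$-connectedness of $\mu$. Chaining the ``cliques'' $A_{x_0}$ through their overlaps therefore places all of $\Gamma$, and hence all of $\Sigma'\cup\Gamma$, in a single component of $H$.

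The step I expect to require the most thought is this last one: the right way to see it is that connectivity of the huge graph $H$ is governed entirely by the small coordinate-wise graph $G_{\Sigma\Gamma}$, mediated by the constant-sequence hub vertices $(x_0,\ldots,x_0)$, and everything else is bookkeeping. The only recurring technical nuisance is parity --- all relevant distances in $G$ are odd because $\Gamma$ and $\Phi$ are its two sides --- and the hypothesis that $\ell$ is odd is precisely what makes the back-and-forth padding land on the correct side.
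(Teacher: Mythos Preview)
The proposal is correct and takes essentially the same approach as the paper: both hinge on the back-and-forth observation that $((x_0,\ldots,x_0),y,z)\in{\sf supp}(\mu_\ell)$ whenever $(x_0,y,z)\in{\sf supp}(\mu)$, use it to show the constant-sequence vertices act as hubs inheriting the connectivity of the original $\Sigma$--$\Gamma$ (resp.\ $\Sigma$--$\Phi$) graph, and use odd-length padding in the bipartite graph $G$ for the completeness claim. Your write-up is more explicit than the paper's terse ``by connectedness of $\mu$'' about why the hubs suffice, but the content is the same.
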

\begin{proof}
  First, note that for every $x\in \Sigma$ and $y\in \Gamma$ such that there is $z\in \Phi$ for which $(x,y,z)$ is in the support of $\mu$,
  it holds that $(\vec{x},y,z)$ is in the support of $\mu_{\ell}$, where $\vec{x} = (x,\ldots,x)$. Thus, by connectedness of $\mu$ it follows
  that there is a path between any $\vec{x}$ and $y\in\Gamma$ in the graph of $\mu_{\ell}$, and as any $\vec{x}'\in \Sigma'$ is connected to
  some $y\in \Gamma$, it follows that the $\Sigma'$, $\Gamma$ graph in $\mu_{\ell}$ is connected. Analogously, we have that the $\Sigma'$,
  $\Phi$ graph is also connected.

  As for the $\Gamma, \Phi$ graph, note that it is connected in $\mu$, hence there is an odd number $w$ such that between any $y\in \Gamma$ and
  $z\in\Phi$ there is a path of length at most $w$ (of odd length). Note that in that case there will also be a path of length exactly $w$
  (as one can always do steps that go from $z$ to some neighbour of it and back), hence we get that for $\ell = w$ the graph between
  $\Gamma$ and $\Phi$ in $\mu_{\ell}$ is complete.
\end{proof}

\subsubsection{The Path Trick Helps in Saturating the Master Embeddings}
It is easy to observe that if $\ell$ is odd, then the image of $\sigma^{\sharp\ell}$ always contains the image of $\sigma$
(by considering inputs of the form $(x,\ldots,x)$). Additionally, the image of the other two components remains the same (and in particular does
not shrink). In light of the formula in Definition~\ref{def:master}, intuition suggests that as long as the image of $\sigma$ is not a subgroup,
the image of $\sigma^{\sharp\ell}$ would have to be larger (as we are considering signed sums). In this context, once the image becomes a sub-group it can
never further increase. A natural hypothesis therefore would be that by applying enough path tricks, the image of $\sigma$ would eventually have to
become a sub-group, in which point we will refer to $\sigma$ as \emph{saturated}.

Strictly speaking, this does not have to be the case if one applies the path trick in the naive way. Nevertheless, if one is willing to apply
alternating path tricks on all coordinates, then eventually $\sigma$ does become saturated. To show that, we first establish the following lemma,
asserting that if the image of $\sigma$ is not a sub-group, then one may enlarge it by applying the path trick $3$ times:
\begin{lemma}\label{lem:path_trick_saturates}
  Let $\Sigma$, $\Gamma$ and $\Phi$ be finite alphabets, let $\mu$ be a pairwise connected distribution over $\Sigma\times \Gamma\times \Phi$.
  Then there are constants $\ell_1,\ell_2$ and $\ell_3=3$ depending only on the alphabet sizes such that if $(\sigma,\gamma,\phi)$
  is a master embedding for $\mu$ into $(H,+)$, wherein each one of ${\sf Image}(\sigma)$, ${\sf Image}(\gamma)$, ${\sf Image}(\phi)$
  contains $0$, then the following holds. Consider the distributions:
  \begin{enumerate}
    \item $\mu'$ which is the result of the application of the path trick on $\mu$ with respect to $z$ of length $\ell_1$,
    producing the master embedding $(\sigma',\gamma',\phi') = (\sigma,\gamma,\phi^{\sharp \ell_1})$.
    \item $\mu''$ which is the result of the application of the path trick on $\mu'$ with respect to $y$ of length $\ell_2$,
    producing the master embedding $(\sigma'',\gamma'',\phi'') = (\sigma',\gamma^{\sharp \ell_2},\phi')$.
    \item $\mu'''$ which is a result of the application of the path trick on $\mu''$ with respect to $x$ of length $\ell_3$,
    producing the master embedding $(\sigma''',\gamma''',\phi''') = (\sigma''^{\sharp\ell_3}, \gamma'', \phi'')$.
  \end{enumerate}
  then if ${\sf Image}(\sigma)$ is not a subgroup of $H$, then ${\sf Image}(\sigma)\subsetneq {\sf Image}(\sigma''')$. Furthermore,
  the support of $\mu'''$ on its first coordinate (namely, its $x$-coordinate) is full, that is, $\Sigma^{\ell_3}$.
\end{lemma}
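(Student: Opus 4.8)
I would prove Lemma~\ref{lem:path_trick_saturates} in two stages: first establish the purely group-theoretic "enlargement step" (if $\mathrm{Image}(\sigma)$ is not a subgroup of $H$, then signed sums of the form $\sum_{j=1}^{\ell}(-1)^{j+1}a_j$ with $a_j\in\mathrm{Image}(\sigma)$ reach strictly outside $\mathrm{Image}(\sigma)$), and then check that the three-fold path trick actually realizes such signed sums while staying inside $\Sigma^{\ell_3}$.

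\textbf{Step 1: The abstract enlargement fact.} Let $S = \mathrm{Image}(\sigma)\subseteq H$, with $0\in S$. I would first observe that since $S$ is not a subgroup of $H$, either $S$ is not closed under addition or not closed under negation; in either case $\langle S\rangle$, the subgroup generated by $S$, strictly contains $S$, and every element of $\langle S\rangle$ is a signed sum $\sum_{j}\pm a_j$ of elements of $S$. The key quantitative point is that because $0\in S$, one can pad: any signed sum of length $\le\ell$ can be extended to a signed sum of length exactly $\ell$ by appending cancelling pairs $a-a$ (using that $0\in S$ means we can also just append $0$'s in the right sign pattern). So for $\ell$ odd and large enough in terms of $|H|$ (hence in terms of the alphabet sizes, since $|H|\le r(m)$), the set of achievable values of $\sigma^{\sharp\ell}$ on the relevant domain is exactly $\langle S\rangle\supsetneq S$. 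I would want to be a little careful about the alternating sign constraint — $(-1)^{j+1}$ forces the odd positions to carry $+$ and even positions $-$ — but since we may place any element of $S$ (including $0$) in any position, and $0\in S$, this constraint costs nothing, and the padding argument goes through. This gives $\mathrm{Image}(\sigma)\subsetneq\mathrm{Image}(\sigma^{\sharp\ell})$ for suitable odd $\ell$.

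\textbf{Step 2: Realizing the enlargement through three path tricks while controlling the alphabet.} The subtlety, flagged in the footnote to Step~2 of the overview, is that a single path trick on $x$ of length $\ell_3$ produces an alphabet $\Sigma'\subseteq\Sigma^{\ell_3}$ that need \emph{not} be all of $\Sigma^{\ell_3}$ — the feasible $x$-tuples are only those arising from genuine length-$\ell_3$ walks in the $(\Gamma,\Phi)$-graph. So before doing the $x$-path-trick of length $\ell_3=3$, I first apply path tricks on $z$ (length $\ell_1$) and then on $y$ (length $\ell_2$); by Lemma~\ref{lem:path_keeps_connectedness}, for $\ell_1,\ell_2$ large enough the $(\Gamma'',\Phi'')$-graph of $\mu''$ becomes a \emph{complete} bipartite graph. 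Once that graph is complete, every edge exists with every $x$-label that appears anywhere, so a length-$3$ walk can traverse any triple of edges freely: this makes the support of $\mu'''$ on its first coordinate equal to $\Sigma^{3}$, giving the "furthermore" claim. Simultaneously, having the first-coordinate support full means $\sigma'''=\sigma''^{\sharp 3}$ ranges over \emph{all} signed sums $\sigma''(x_1)-\sigma''(x_2)+\sigma''(x_3)$ with $x_1,x_2,x_3\in\Sigma$ arbitrary; since $\sigma''$ has the same image $S$ as $\sigma$ (the $z$- and $y$-path-tricks don't change the $x$-embedding, by Lemma~\ref{lem:master_stays} and the formula in Definition~\ref{def:master}), and $0\in S$, these length-$3$ signed sums already cover $\{a-b+c : a,b,c\in S\}\supseteq\{a+c-0:a,c\in S\}\cup\{-b+0+0\}$... — here I need $\langle S\rangle$ to be generated by length-$\le 3$ signed sums, which is \emph{not} automatic, so instead I would iterate: apply this three-fold block repeatedly, each time strictly enlarging $\mathrm{Image}(\sigma)$ until it stabilizes, at which point it must be a subgroup (a strictly increasing chain of subsets of the finite $H$ terminates). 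Actually, re-reading the statement, the lemma only claims \emph{one} strict enlargement per three-fold application when $S$ is not a subgroup — so length $3$ suffices: with $x_1,x_2,x_3$ free and $0\in S$, take $x_2$ with $\sigma''(x_2)=0$ and $x_1,x_3$ witnessing a failure of closure, e.g. if $a+c\notin S$ then $\sigma'''=a-0+c=a+c\notin S$, or if $-b\notin S$ then $\sigma'''=0-b+0$... wait, middle is $-$, so take $\sigma''(x_1)=\sigma''(x_3)=0$, $\sigma''(x_2)=b$, giving $-b\notin S$. Either way one length-$3$ signed sum escapes $S$.

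\textbf{Main obstacle.} The real work is Step~2's bookkeeping: verifying that after the $z$- and $y$-path-tricks the $(\Gamma,\Phi)$-graph is complete (Lemma~\ref{lem:path_keeps_connectedness} handles this, but one must check the lengths $\ell_1,\ell_2$ depend only on alphabet sizes and that pairwise connectedness is preserved through each step so the next path trick is legitimate), and then carefully confirming that completeness of that graph really does force the $x$-support of the subsequent length-$3$ trick to be the full cube $\Sigma^3$ — this requires that \emph{every} $x\in\Sigma$ labels \emph{some} edge of the (now complete) graph, which follows from pairwise connectedness of the original $\mu$ (every $x$ appears in some atom). The group theory in Step~1 is routine once the "$0\in S$ lets us pad freely" observation is in place; the danger spots are purely the combinatorial consistency of chaining three path tricks and the sign-pattern constraint, neither of which is deep but both of which need to be stated precisely.
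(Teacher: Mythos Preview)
Your high-level plan and the group theory are right and match the paper: exploit $0\in S=\mathrm{Image}(\sigma)$ so that a length-$3$ alternating sum $a-0+c$ witnesses failure of closure under addition (in a finite abelian group, a subset containing $0$ and closed under addition is already a subgroup, so your negation case is redundant).

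The gap is in the combinatorial mechanism of Step~2. The path trick on $z$ of length $\ell_1$ does \emph{not} make the $(\Gamma'',\Phi'')$-graph complete; by Lemma~\ref{lem:path_keeps_connectedness} (with the roles permuted) it makes the support of $\mu'$ on $\Sigma\times\Gamma$ full. Likewise the subsequent path trick on $y$ makes the support of $\mu''$ on $\Sigma\times\Phi'$ full. Your follow-up assertion that ``every edge exists with every $x$-label'' amounts to full support on $\Sigma\times\Gamma''\times\Phi''$, which is much stronger than anything these two path tricks provide; and the weaker statement in your ``main obstacle'' paragraph, that every $x$ labels \emph{some} edge of a complete $(\Gamma'',\Phi'')$-graph, is not enough to chain three arbitrary $x$-labels into a single walk.

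The paper instead uses the two correct fullness facts to build the walk by hand. Given $(x,x',x'')\in\Sigma^3$: fix any $y\in\Gamma$ and let $\vec{y}=(y,\ldots,y)\in\Gamma''$. Fullness of $\mu'$ on $\Sigma\times\Gamma$ yields $\vec{z},\vec{z}'\in\Phi'$ with $(x,y,\vec{z}),(x',y,\vec{z}')\in\mathrm{supp}(\mu')$; lifting via the trivial $y$-path puts $(x,\vec{y},\vec{z})$ and $(x',\vec{y},\vec{z}')$ in $\mathrm{supp}(\mu'')$, giving the first two steps of the walk through the common vertex $\vec{y}$. Fullness of $\mu''$ on $\Sigma\times\Phi'$ then supplies $\vec{y}'$ with $(x'',\vec{y}',\vec{z}')\in\mathrm{supp}(\mu'')$, the third step. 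So the argument hinges on being able to pair each $x\in\Sigma$ with an \emph{arbitrary} $\Gamma$-vertex (via $\mu'$) and with an \emph{arbitrary} $\Phi'$-vertex (via $\mu''$), not on the $(\Gamma'',\Phi'')$-graph being complete.
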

\begin{proof}
  Assume that ${\sf Image}(\sigma)$ is not a subgroup of $H$; then it is not closed under addition, so there are $x_1,x_2\in \Sigma$ such that
  $\sigma(x_1) + \sigma(x_2)\not \in {\sf Image}(\sigma)$. We will show that after we choose $\ell_1$ and $\ell_2$ appropriately large, we could
  take $\ell_3  = 3$ and get that the support on $\Sigma^3$ is full, hence taking $x_1$, $x_2$ and some $x^{\star}\in\Sigma$ for which $\sigma(x^{\star}) = 0$,
  we would get that $\sigma'''(x_1,x^{\star},x_2) = \sigma(x_1) - \sigma(x^{\star}) + \sigma(x_2)$ is in ${\sf Image}(\sigma''')$ and not in ${\sf Image}(\sigma)$.

  Denote the alphabets in question as: $\mu'$ is a distribution over $\Sigma\times\Gamma\times \Phi'$,
  $\mu''$ is a distribution over $\Sigma\times\Gamma'\times\Phi'$ and $\mu''$ is a distribution over
  $\Sigma'\times\Gamma'\times\Phi'$.
  By Lemma~\ref{lem:path_keeps_connectedness}, we may take $\ell_1$ large enough so that the support of $\mu'$ on $\Sigma\times \Gamma$ is full,
  and fixing $\ell_1$ we may choose $\ell_2$ large enough so that the support of $\mu''$ on $\Sigma\times \Phi'$ is full. We then pick $\ell_3 = 3$.

  We show that any $(x,x',x'')\in \Sigma^3$ is in $\Sigma'$. Pick some $y\in \Gamma$ and look at $\vec{y}\in \Gamma^{\ell_2}$ which has all of its coordinates
  equal to $y$, and note that it is in $\Gamma'$. Since the support of $\mu'$ on $\Sigma\times \Gamma$ is full, we get that there are some $\vec{z},\vec{z}'\in \Phi'$
  such that $(x,y,\vec{z})$ and $(x',y, \vec{z}')$ are in the support of $\mu'$. It follows that $(x,\vec{y},\vec{z})$ and $(x',\vec{y},\vec{z}')$ are both in
  the support of $\mu''$, so in the graph of $\mu''$ we get a path
  from $\vec{z}$ to $\vec{y}$ to $\vec{z}'$ labeled by $x,x'$. As the support of $\Sigma\times \Phi'$ in $\mu''$ is full, we may continue this path
  by an edge labeled by $x''$ to get to some $\vec{y}'$. Overall, we get a path from $\vec{z}$ to $\vec{y}$ to $\vec{z}'$ to $\vec{y}'$ labeled by
  $x,x',x''$, and this path means that $(x,x',x'')$ is in the support of $\mu'''$.
\end{proof}

Using Lemma~\ref{lem:path_trick_saturates} iteratively, as long as our master embeddings do not have images which are sub-groups, we may enlarge them
via consecutive applications of the path trick (while keeping all of the properties assumed for our original distribution, such as probability of atoms
being $\Omega(1)$, pairwise connectedness and so on). It will be important for us to do this is in a more careful manner, and maintain the fact that the
alphabet of $x$ is $\Sigma^{T}$ for some $T\in\mathbb{N}$. With respect to that, iterating Lemma~\ref{lem:path_trick_saturates} directly gives saturation
with respect to the embedding of $x$:
\begin{lemma}\label{lem:path_trick_saturates_x}
  Let $\Sigma$, $\Gamma$ and $\Phi$ be finite alphabets, let $\mu$ be a pairwise connected distribution over $\Sigma\times \Gamma\times \Phi$.
  Then there are constants $T$ and $\ell$ depending only on the alphabet sizes such that if $(\sigma,\gamma,\phi)$
  is a master embedding for $\mu$ into $(H,+)$, wherein each one of ${\sf Image}(\sigma)$, ${\sf Image}(\gamma)$, ${\sf Image}(\phi)$
  contains $0$, then repeating the process in Lemma~\ref{lem:path_trick_saturates} $T$ times one gets a distribution $\mu'$ over $\Sigma'\times\Gamma'\times\Phi'$
  times where
  \begin{enumerate}
    \item $\Sigma' = \Sigma^{\ell}$.
    \item Letting $(\sigma',\gamma',\gamma')$ be the induced master embedding of $\mu'$, we have that ${\sf Image}(\sigma')$ is a subgroup of $H$.
  \end{enumerate}
\end{lemma}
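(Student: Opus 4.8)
The plan is to iterate Lemma~\ref{lem:path_trick_saturates}, once per ``round'', and to track how ${\sf Image}(\sigma)$ evolves. Set $\mu^{(0)} = \mu$ with master embedding $(\sigma^{(0)},\gamma^{(0)},\phi^{(0)}) = (\sigma,\gamma,\phi)$, and for $k\geq 0$ let $\mu^{(k+1)}$ be obtained from $\mu^{(k)}$ by running the three consecutive path tricks of Lemma~\ref{lem:path_trick_saturates} (with respect to $z$, then $y$, then $x$), with induced master embedding $(\sigma^{(k+1)},\gamma^{(k+1)},\phi^{(k+1)})$. To be entitled to keep invoking Lemma~\ref{lem:path_trick_saturates}, I would first record the invariants preserved along the way: (i)~$\mu^{(k)}$ is pairwise connected, by Lemma~\ref{lem:path_keeps_connectedness} applied three times per round; (ii)~$(\sigma^{(k)},\gamma^{(k)},\phi^{(k)})$ is a master embedding of $\mu^{(k)}$ into the \emph{fixed} group $(H,+)$, by Lemma~\ref{lem:master_stays} together with its evident analogues for the $y$- and $z$-directions (obtained by relabeling coordinates); and (iii)~$0$ lies in the image of each of $\sigma^{(k)},\gamma^{(k)},\phi^{(k)}$, since every path-trick length used is odd and hence $\sigma^{\sharp\ell}(x,\ldots,x) = \sigma(x)$ (the alternating sum of an odd number of equal terms is a single term), so the $\sharp$-operation never deletes $0$ from an image.

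The crux is a monotonicity statement for ${\sf Image}(\sigma)$. Within round $k+1$, steps~1 and~2 (the $z$- and $y$-path tricks) leave the $x$-component untouched, and step~3 replaces $\sigma^{(k)}$ by $(\sigma^{(k)})^{\sharp 3}$ on the new $x$-alphabet, which by the ``furthermore'' clause of Lemma~\ref{lem:path_trick_saturates} is the full $3$-fold power of the old one; evaluating on the diagonal $(x,x,x)$ gives $(\sigma^{(k)})^{\sharp 3}(x,x,x) = \sigma^{(k)}(x)$, so ${\sf Image}(\sigma^{(k)}) \subseteq {\sf Image}(\sigma^{(k+1)})$ always. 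Moreover Lemma~\ref{lem:path_trick_saturates} tells us this inclusion is \emph{strict} whenever ${\sf Image}(\sigma^{(k)})$ is not a subgroup of $H$. Since all these images lie inside the single finite group $H$, whose order is bounded by a function of $m$ alone (by Definition~\ref{def:master_embed} and Lemma~\ref{lem:exhaust_embed}, $H$ is a product of boundedly many Abelian groups of bounded size, and it is independent of $n$ and of $\mu$), the strictly increasing chain of images can be iterated at most $\card{H}$ times. Hence with $T := \card{H}$, the distribution $\mu' := \mu^{(T)}$ has the property that ${\sf Image}(\sigma')$, for $\sigma' := \sigma^{(T)}$, is a subgroup of $H$; this is part~2.

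For part~1, I would track the $x$-alphabet across rounds: steps~1 and~2 of a round leave it unchanged, and step~3 (path trick with respect to $x$ of length $\ell_3 = 3$) turns a current $x$-alphabet $\Sigma^{m_k}$ into the \emph{full} set $(\Sigma^{m_k})^3 = \Sigma^{3m_k}$ by the ``furthermore'' clause of Lemma~\ref{lem:path_trick_saturates}. Starting from $\Sigma^{1}$, after $T$ rounds the $x$-alphabet is $\Sigma^{\ell}$ with $\ell = 3^{T}$. Finally, the lengths $\ell_1,\ell_2$ produced in round $k$ depend only on the alphabet sizes at that stage, and since $T$ depends only on $m$ while the alphabet sizes throughout are governed by $m$ and $T$ (a bounded number of bounded-blowup operations), the final $T$ and $\ell$ depend only on the original alphabet sizes, as required.

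I expect the only genuinely delicate part to be the bookkeeping in the previous paragraphs --- verifying that pairwise connectedness, the master-embedding property, and ``$0$ in the images'' all propagate cleanly from one round to the next, and in particular that the ambient group $H$ stays fixed --- while the mathematical content beyond that is just the short monotonicity-plus-finiteness argument. One should also take care to note explicitly that $\card{H}$, and hence $T$, does not secretly depend on $n$ or on $\mu$; it does not, precisely because $H$ is assembled solely from embeddings into groups of size at most $r(m)$.
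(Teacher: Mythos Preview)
Your proposal is correct and follows essentially the same approach as the paper: iterate Lemma~\ref{lem:path_trick_saturates}, use that each application strictly enlarges ${\sf Image}(\sigma)$ until it becomes a subgroup, and bound the number of rounds by $T=\card{H}$. The paper's own proof is a single sentence to this effect; your version is more explicit about the invariants (pairwise connectedness, master embedding, $0$ in the images) needed to keep re-invoking Lemma~\ref{lem:path_trick_saturates}, and about the $x$-alphabet computation $\ell=3^{T}$, but the underlying argument is identical.
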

\begin{proof}
  As long as ${\sf Image}(\sigma)$ is not a sub-group, applying the process in Lemma~\ref{lem:path_trick_saturates} enlarges it, and the new alphabet
  of $x$ is a power of the older alphabet of $x$. Hence repeating the process $T = \card{H}$ times gives eventually $\mu'$ on which $\sigma'$ is saturated.
\end{proof}

The next lemma is an analogous statement for $y$ and $z$, and we show that the same procedure -- applied sufficiently many times  -- also works.
The argument is a bit more subtle, as we cannot afford ourselves to apply long path tricks on $x$; such operation may not preserve the fact that
the alphabet of $x$ would remain a power of $\Sigma$. We remark that the role of $y$ and $z$ is symmetric, and thus while we formulate the statement in terms of $z$ (as well
as in Lemma~\ref{lem:path_trick_saturates}), by flipping the roles of $y$ and $z$ one gets an analogous statement for $y$, which we shall also use.
\begin{lemma}\label{lem:path_trick_saturates_z}
  Let $\Sigma$, $\Gamma$ and $\Phi$ be finite alphabets, let $\mu$ be a pairwise connected distribution over $\Sigma\times \Gamma\times \Phi$.
  Then there are constants $T,T'$ and $\ell$ depending only on the alphabet sizes such that if $(\sigma,\gamma,\phi)$
  is a master embedding for $\mu$ into $(H,+)$, wherein each one of ${\sf Image}(\sigma)$, ${\sf Image}(\gamma)$, ${\sf Image}(\phi)$
  contains $0$, then the following holds. Consider the distributions:
  \begin{enumerate}
    \item Apply the transformation in Lemma~\ref{lem:path_trick_saturates} $T$ times to get a distribution $\mu'$.
    Let $(\sigma',\gamma',\phi')$ be the induced master embedding.
    \item Apply the path trick on $\mu'$ with respect to $y$ for $T'$ times to get a distribution $\mu''$.
    \item Apply the path trick on $\mu'$ with respect to $z$ for $3$ steps to get a distribution $\mu'''$.
    Let $(\sigma''',\gamma''',\phi''')$ be the induced master embedding.
  \end{enumerate}
  If ${\sf Image}(\phi)$ is not a subgroup of $H$, then ${\sf Image}(\phi)\subsetneq {\sf Image}(\phi''')$.
  Furthermore, the support of $\mu'''$ on its first coordinate (namely, its $x$-coordinate) is full, that is, $\Sigma^{\ell}$.
\end{lemma}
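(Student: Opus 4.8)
The plan is to mimic the structure of Lemma~\ref{lem:path_trick_saturates}, but to push all of the path tricks that touch the $x$-coordinate into the first bullet (invoking Lemma~\ref{lem:path_trick_saturates_x}, which is exactly what keeps the $x$-alphabet a power of $\Sigma$), and to carry out the remaining enrichment only on the $y$- and $z$-coordinates, whose alphabets we are free to enlarge. First I would run the $T$-fold transformation of Lemma~\ref{lem:path_trick_saturates}: by Lemma~\ref{lem:path_trick_saturates_x} this yields a distribution $\mu'$ with $\Sigma'=\Sigma^{\ell}$ and ${\sf Image}(\sigma')$ a subgroup, and by tracking Lemma~\ref{lem:path_keeps_connectedness} through the three sub-steps one also records that the $\Sigma'$--$\Phi'$ bipartite graph of $\mu'$ is complete; it is this last fact that forces the $\sigma$-saturation to be done first. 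Since a path trick on $z$ can only enlarge the image of the $z$-embedding (evaluate on constant tuples, as in Definition~\ref{def:master}), we have ${\sf Image}(\phi)\subseteq{\sf Image}(\phi')$, and similarly ${\sf Image}(\phi')\subseteq{\sf Image}(\phi''')$ because $\phi'''=(\phi')^{\sharp 3}$; hence if ${\sf Image}(\phi)\subsetneq{\sf Image}(\phi')$ we are already done, so we may assume ${\sf Image}(\phi)={\sf Image}(\phi')=:S$, a nonempty non-subgroup of the finite group $H$ containing $0$, and therefore not closed under addition: there are $s,c\in S$ with $s+c\notin S$.

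Next I would carry out the long $y$-path trick (producing $\mu''$, for which $\phi''=\phi'$ and $\Sigma''=\Sigma'$) and then the length-$3$ $z$-path trick (producing $\mu'''$ with induced master embedding $(\sigma''',\gamma''',\phi''')$ by Lemma~\ref{lem:master_stays}, where $\phi'''(z_1,z_2,z_3)=\phi'(z_1)-\phi'(z_2)+\phi'(z_3)$). Picking symbols $z_1,z_2,z_3\in\Phi'$ with $\phi'(z_1)=s$, $\phi'(z_2)=0$, $\phi'(z_3)=c$, the value $\phi'''(z_1,z_2,z_3)=s+c$ lies outside $S={\sf Image}(\phi)$, so it remains only to check that the triple $(z_1,z_2,z_3)$ is in the first-coordinate support of $\mu'''$, i.e.\ that there is a length-$3$ walk in the $\Sigma'$--$\Gamma''$ bipartite graph of $\mu''$ whose three edges carry the labels $z_1,z_2,z_3$. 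Here the purpose of the $y$-path trick is to produce a $\Gamma''$-vertex incident to edges carrying any two prescribed $\Phi'$-labels: since $\mu''$ records the $y$-labels of walks in the (complete) $\Sigma'$--$\Phi'$ graph of $\mu'$, a block-constant label sequence $\vec y=(w_1)^{L_1}(w_2)^{L_2}\cdots$ realizes $(x,\vec y,z)\in{\sf supp}(\mu'')$ exactly when one can walk from $x$ to $z$ traversing, within each block, edges of $\mu'$ bearing the corresponding label; using completeness of the $\Sigma'$--$\Phi'$ graph of $\mu'$ one can choose such a $\vec y$ realizable both as a walk ending at $z_1$ and as a walk ending at $z_2$, and the third label $z_3$ is then attached for free (again by completeness of that graph). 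Finally, the $x$-alphabet is untouched by the last two path tricks, so it equals $\Sigma^{\ell}$, and it stays full because every $x_0\in\Sigma^{\ell}$ has a neighbour in the pairwise connected $\Sigma'$--$\Gamma''$ graph of $\mu''$ and hence begins a length-$3$ walk by going back and forth along a single edge.

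The main obstacle is precisely the realizability claim in the previous paragraph: showing that after only the long $y$-path trick there is a single $\Gamma''$-vertex incident to edges carrying any two prescribed $\Phi'$-labels. This is where one must argue carefully about which label sequences in $\Gamma''$ connect which pairs of $\Phi'$-labels, extending the constant-label-sequence device from the proof of Lemma~\ref{lem:path_trick_saturates} to block-constant sequences and exploiting that Step~1 made the $\Sigma'$--$\Phi'$ graph complete rather than merely connected. The remaining ingredients --- the reduction to the case ${\sf Image}(\phi')=S$, the elementary group-theoretic fact producing $s+c\notin S$, and the fullness of the $x$-support --- are routine.
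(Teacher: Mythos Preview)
Your proposal has a genuine gap. The claim that after Step~1 the $\Sigma'$--$\Phi'$ bipartite graph of $\mu'$ is complete is not established by ``tracking Lemma~\ref{lem:path_keeps_connectedness} through the three sub-steps.'' Recall that Lemma~\ref{lem:path_keeps_connectedness}, for a path trick on a given coordinate, only makes the bipartite graph between the \emph{other two} coordinates complete. In each iteration of Lemma~\ref{lem:path_trick_saturates}, sub-step~(b) (path trick on $y$) makes the $\Sigma\times\Phi'$ graph full, but the final sub-step~(c) is a length-$3$ path trick on $x$, which changes the $x$-alphabet from $\Sigma^{\ell/3}$ to $\Sigma^{\ell}$. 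After this, $\Sigma^{\ell}\times\Phi'$ is full only if one can realize an arbitrary walk of length $3$ with prescribed $x$-labels ending at a prescribed $\vec z$; the penultimate step of such a walk requires that for a given $x_2\in\Sigma^{\ell/3}$ and a given $\vec y_1\in\Gamma'$ there is some $\vec z_0$ with $(x_2,\vec y_1,\vec z_0)$ in the support --- i.e.\ that $\Sigma^{\ell/3}\times\Gamma'$ is full after sub-step~(b), which is nowhere established (sub-step~(a) gave $\Sigma^{\ell/3}\times\Gamma$ full for the \emph{old} $\Gamma$, but the $y$-path trick then enlarged $\Gamma$). Your block-constant realizability argument hinges entirely on this completeness, as you yourself flag.

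The paper's proof sidesteps this by tracking a different and more robust property through Step~1: completeness of the $\Gamma_{\sf original}\times\Phi_{\sf original}$ sub-graph, where $\Gamma_{\sf original},\Phi_{\sf original}$ are the copies of the \emph{original} $\Gamma,\Phi$ sitting inside $\Gamma',\Phi'$ as constant tuples. One shows directly that a single application of Lemma~\ref{lem:path_trick_saturates} converts any length-$3$ path between $y$ and $z$ in $\mu$ into an edge between $(y,\ldots,y)$ and $(z,\ldots,z)$ in the output distribution; iterating, Step~1 makes $\Gamma_{\sf original}\times\Phi_{\sf original}$ full. This property survives the $y$-path trick of Step~2, and Step~2 is what actually makes $\Sigma''\times\Phi''$ full (via Lemma~\ref{lem:path_keeps_connectedness}). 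One then chooses the $z_i$ to be constant tuples $(z_i^{\circ},\ldots,z_i^{\circ})$ for $z_i^{\circ}\in\Phi$ with $\phi(z_1^{\circ})+\phi(z_3^{\circ})\notin{\sf Image}(\phi)$ and $\phi(z_2^{\circ})=0$ (so that $\phi'''$ evaluates to $\phi(z_1^{\circ})+\phi(z_3^{\circ})$ via Definition~\ref{def:master}), and builds the length-$3$ walk in the $\Sigma''\cup\Gamma''$ graph of $\mu''$ as follows: pick any $y\in\Gamma$, use $\Gamma_{\sf original}\times\Phi_{\sf original}$ fullness to find $x_1,x_2$ with $(x_i,(y,\ldots,y),(z_i^{\circ},\ldots,z_i^{\circ}))$ in the support for $i=1,2$, then use $\Sigma''\times\Phi''$ fullness to find $\vec y$ with $(x_2,\vec y,(z_3^{\circ},\ldots,z_3^{\circ}))$ in the support. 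No block-constant device or completeness of $\Sigma'\times\Phi'$ is needed.
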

\begin{proof}
  Consider the transformation in Lemma~\ref{lem:path_trick_saturates}: given a distribution $\mu$ over $\Sigma\times \Gamma\times \Phi$,
  it outputs a distribution $\nu$ over $\Sigma^{3}\times \Gamma'\times \Phi'$ where $\Gamma'\subseteq \Gamma^{\ell_2}$ and $\Phi'\subseteq \Phi^{\ell_3}$
  for some $\ell_2,\ell_3$ depending only on the alphabet sizes. Let $\Gamma_{{\sf original}}\subseteq \Gamma^{\ell_2}$ and $\Phi_{{\sf original}}\subseteq \Phi^{\ell_3}$
  be copies of $\Gamma$ and $\Phi$ in $\nu$, namely
  \[
  \Gamma_{{\sf original}} = \sett{(y,\ldots,y)}{y\in\Gamma},
  \qquad
  \Phi_{{\sf original}} = \sett{(z,\ldots,z)}{z\in\Phi},
  \]
  and note that $\Gamma_{{\sf original}}\subseteq \Gamma'$ and $\Phi_{{\sf original}}\subseteq \Phi'$. Note that if there is a path of length at most $3$ between
  $y$ and $z$ in $\mu$, then there is an edge between $(y,\ldots,y)$ and $(z,\ldots,z)$ in $\nu$. Indeed, if there was a path $y$ to $z'$ to $y'$ to $z$ labeled by
  $x_1,x_2,x_3$, then $((x_1,x_2,x_3), (y,\ldots,y), (z,\ldots,z))$ would be in ${\sf supp}(\nu)$. Repeating this transformation twice get that if there was a path from
  $y$ to $z$ of length at most $4$, then we would have an edge between $(y,\ldots,y)$ and $(z,\ldots,z)$ in the new distribution.

  Thus, as $\mu$ is pairwise connected we can apply the transformation
  in Lemma~\ref{lem:path_trick_saturates} $T$ times, where $T$ only depends on the alphabet sizes, to get a distribution $\mu'$ whose alphabet on $x$
  is $\Sigma^{\ell}$ for some $\ell$ and its support on $\Gamma_{{\sf original}}\times \Phi_{{\sf original}}$ is full. Let $(\sigma',\gamma',\phi')$
  be the induced master embedding as in Definition~\ref{def:master}. We then apply the path trick on
  $\mu'$ with respect to $y$ for $T'$ steps so that the support of the distribution over $x$ and $z$ is full (thanks to Lemma~\ref{lem:path_keeps_connectedness}), and
  we think of $\mu''$ as a distribution over $\Sigma''\times\Gamma''\times \Phi''$. We remark that the support of $\mu''$ on $\Gamma_{{\sf original}}\times \Phi_{{\sf original}}$
  is still full (this is preserved under the last application of the path trick), and also that $\Sigma'' = \Sigma^{3T}$.

  We now apply the path trick on $\mu''$ for length $3$ on $z$ to get the distribution $\mu'''$ over $\Sigma'''\times\Gamma'''\times\Phi'''$, where
  $\Sigma''' = \Sigma'' = \Sigma^{3T}$, and we have an induced master embedding of $\mu'''$, denoted by $(\sigma''',\gamma''',\phi''')$ as in Definition~\ref{def:master}.
  As in Lemma~\ref{lem:path_trick_saturates} we have that ${\sf Image}(\phi)\subseteq {\sf Image}(\phi''')$, and we next argue that if ${\sf Image}(\phi)$ is not a sub-group,
  then this is a strict containment.

  If ${\sf Image}(\phi)$ is not a sub-group of $H$,
  then we may find $z_1,z_3\in \Phi$ such that $\phi(z_1)+\phi(z_3)\not\in {\sf Image}(\phi)$, and we pick $z_2\in\Phi$ such that $\phi(z_2) = 0$.
  Below we show that $((z_1,\ldots,z_1),(z_2,\ldots,z_2),(z_3,\ldots,z_3))$ is in the support of $\mu'''$ on $\Phi'''$, and we now argue that this would
  give the strict containment. Indeed, by Definition~\ref{def:master} we have that $\phi'''((z_1,\ldots,z_1),(z_2,\ldots,z_2),(z_3,\ldots,z_3))$ is
  equal to
  \[
  \phi'(z_1,\ldots,z_1)-\phi'(z_2,\ldots,z_2) + \phi'(z_3,\ldots,z_3)
  \]
  and as $\phi'(z_1,\ldots,z_1) = \phi(z_1)$, $\phi'(z_2,\ldots,z_2) = \phi(z_2) = 0$ and $\phi'(z_3,\ldots,z_3) = \phi(z_3)$, we get that
  ${\sf Image}(\phi'')\subsetneq{\sf Image}(\phi)$ as desired.

  To show that $((z_1,\ldots,z_1),(z_2,\ldots,z_2),(z_3,\ldots,z_3))$ is in the support of $\mu'''$ on $\Phi'''$, first pick any
  $y\in \Gamma$. Since the support of $\mu''$ on the set $\Gamma_{{\sf original}}\times \Phi_{{\sf original}}$ is full,
  we get that there are $x_1,x_2\in \Sigma''$ such that $(x_1,(y,\ldots,y),(z_1,\ldots,z_1))$
  and $(x_2,(y,\ldots,y),(z_2,\ldots,z_2))$ are in ${\sf supp}(\mu'')$. As in $\mu''$ we have that the support on $\Sigma''\times \Phi''$ is full, it follows
  that there is $\vec{y}$ such that $(x_2,\vec{y},(z_3,\ldots,z_3))$ is in  ${\sf supp}(\mu'')$. We note that now in $\mu''$ we have a path from
  $x_1$ to $(y,\ldots,y)$ to $x_2$ to $\vec{y}$, whose labels are $(z_1,\ldots,z_1)$, $(z_2,\ldots,z_2)$ and $(z_3,\ldots,z_3)$, and therefore
  $((z_1,\ldots,z_1),(z_2,\ldots,z_2),(z_3,\ldots,z_3))$ is in the support of $\mu'''$ on $\Phi'''$ as required.
\end{proof}

We now combine Lemmas~\ref{lem:path_trick_saturates_x},~\ref{lem:path_trick_saturates_z} to get our overall transformation that saturates the master embeddings.
\begin{lemma}\label{lem:iterate_path_trick_saturates}
  Let $\Sigma$, $\Gamma$ and $\Phi$ be finite alphabets of size at most $m$, and let $\mu$ be a pairwise connected distribution in which the probability
  of each atom is at least $\alpha>0$, and let $(\sigma_{{\sf master}}, \gamma_{{\sf master}}, \phi_{{\sf master}})$ be a master embedding for $\mu$
  into an Abelian group $(H,+)$.
  Then, there are $\alpha' = \alpha'(m,\alpha)>0$, $\ell = \ell(m)\in\mathbb{N}$ and a distribution $\mu'$ over
  $\Sigma'\times\Gamma'\times \Phi'$ which results from $\mu$ by a sequence of applications of the path trick, such that:
  \begin{enumerate}
    \item $\mu'$ is pairwise connected and the probability of each atom is at least $\alpha'$.
    \item The number of applications of the path trick is at most some $T\in\mathbb{N}$ depending only on $m$.
    Furthermore, the new alphabet sizes $\card{\Sigma'}$, $\card{\Gamma'}$ and $\card{\Phi'}$ are bounded by some function of $m$.
    \item There is a master embedding $(\sigma_{{\sf master}}', \gamma_{{\sf master}}', \phi_{{\sf master}}')$
    of $\mu'$ into $(H,+)$ such that:
    \begin{enumerate}
    \item The image of each one of $\sigma_{{\sf master}}'$, $\gamma_{{\sf master}}'$ and $\phi_{{\sf master}'}$ is $H$.
      \item This master embedding is given by a master embedding of $\mu$ by the transformations described in Definition~\ref{def:master}
        following the applications of the path trick.
      \item The support of $(\sigma_{{\sf master}}'(x), \gamma_{{\sf master}}'(y), \phi_{{\sf master}}'(z))$ where $(x,y,z)\sim \mu'$
      is full, namely it is
      \[
      \sett{(a,b,c)\in H}{a+b+c = 0}.
      \]
      \item The support of $\mu'$ on the first coordinate is $\Sigma' = \Sigma^{\ell}$.
    \end{enumerate}

  \end{enumerate}
\end{lemma}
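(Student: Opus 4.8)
The plan is to assemble Lemma~\ref{lem:iterate_path_trick_saturates} by interleaving the single-direction saturation procedures of Lemmas~\ref{lem:path_trick_saturates_x} and~\ref{lem:path_trick_saturates_z} in a loop, while maintaining as an invariant that the $x$-alphabet stays a power of $\Sigma$. Concretely: apply Lemma~\ref{lem:path_trick_saturates_x} once to make $\mathsf{Image}(\sigma_{\mathsf{master}})$ a subgroup; then apply the $z$-version (Lemma~\ref{lem:path_trick_saturates_z}) which enlarges $\mathsf{Image}(\phi_{\mathsf{master}})$ whenever it is not a subgroup, then the symmetric $y$-version; whenever one of these three steps fires it may have disturbed the subgroup status of one of the other two images, so we re-run the whole cycle. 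The key point is that each firing of a step strictly enlarges one of the three finite sets $\mathsf{Image}(\sigma_{\mathsf{master}}), \mathsf{Image}(\gamma_{\mathsf{master}}), \mathsf{Image}(\phi_{\mathsf{master}})\subseteq H$, and none of these images ever shrinks (the remark before Lemma~\ref{lem:path_trick_saturates} says $\sigma^{\sharp\ell}$ always contains $\mathsf{Image}(\sigma)$ and the other two images are unchanged by a path trick on the third coordinate). Hence the total number of cycles, and therefore the total number $T$ of path-trick applications, is bounded by a function of $|H|$, which in turn is bounded by a function of $m$ via Definition~\ref{def:master_embed} and Lemma~\ref{lem:exhaust_embed}. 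When the loop terminates, all three images are subgroups of $H$.

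**Next I would** upgrade ``all three images are subgroups'' to ``all three images equal $H$''. After the last application we may (by Lemma~\ref{lem:path_trick_saturates} / Lemma~\ref{lem:path_keeps_connectedness}) also assume that the graph between $\Gamma'$ and $\Phi'$ is complete, so the support of $\mu'$ on the last two coordinates is full. Now if $(a,b,c)$ and $(a',b',c')$ are in the support of $(\sigma_{\mathsf{master}}'(x),\gamma_{\mathsf{master}}'(y),\phi_{\mathsf{master}}'(z))$ with the same first component $a=a'$, then $b+c = b'+c' = -a$; combining with pairwise connectedness of $\mu'$ (preserved by Lemma~\ref{lem:path_keeps_connectedness}) and fullness of the $\Gamma'\times\Phi'$ support, a short argument shows the images $\mathsf{Image}(\gamma_{\mathsf{master}}')$, $\mathsf{Image}(\phi_{\mathsf{master}}')$ must be cosets of the same subgroup $H_0\le H$, and since each contains $0$ (our running normalization) they both equal $H_0$; then $\mathsf{Image}(\sigma_{\mathsf{master}}')\subseteq -H_0 = H_0$ forces all three to be $H_0$, and without loss of generality we rename $H_0$ to $H$ (equivalently, we could have shrunk the target group at the outset). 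Fullness of the support of $(\sigma_{\mathsf{master}}'(x),\gamma_{\mathsf{master}}'(y),\phi_{\mathsf{master}}'(z))$ over $\{(a,b,c)\in H^3 : a+b+c=0\}$ then follows from the same connectedness: any value of $\sigma_{\mathsf{master}}'$ is attained, and given that value the conditional support of $(\gamma_{\mathsf{master}}'(y),\phi_{\mathsf{master}}'(z))$ sweeps out all of the corresponding affine line, again by the complete $\Gamma'\times\Phi'$ graph.

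**The remaining bookkeeping items** are routine: property~(1), that $\mu'$ is pairwise connected with atom probability $\ge \alpha'$, follows from Lemma~\ref{lem:path_keeps_connectedness} together with the elementary fact that one application of the path trick of length $\ell$ keeps atom probabilities bounded below by $\alpha^{O(\ell)}\ge \alpha^{O_m(1)}$, iterated $T = O_m(1)$ times; property~(2), the bound on the number of path-trick applications and on the new alphabet sizes, is exactly the termination bound from the loop plus the observation that each step multiplies alphabet sizes by a constant depending only on $m$; property~(3)(b), that the final master embedding is obtained from a master embedding of $\mu$ by the Definition~\ref{def:master} transformations, is immediate from Lemma~\ref{lem:master_stays} applied once per path-trick step; and property~(3)(d), $\Sigma' = \Sigma^\ell$, is the invariant we maintained throughout (this is precisely why we must use the careful $z$-/$y$-saturation of Lemma~\ref{lem:path_trick_saturates_z} rather than naive long path tricks on $x$).

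**The main obstacle** I anticipate is the termination/invariant interplay in the first paragraph: showing that the three saturation sub-procedures can be scheduled so that (i) each non-trivial step strictly grows one of the three images, (ii) no step ever shrinks any image, and (iii) the $x$-alphabet remains a literal power of $\Sigma$ at every stage. Point (iii) is the delicate one, since Lemma~\ref{lem:path_trick_saturates_z} only keeps $\Sigma'$ a power of $\Sigma$ because it restricts to the ``diagonal'' copies $\Gamma_{\mathsf{original}}, \Phi_{\mathsf{original}}$ and pays for it with a longer preliminary $T$-fold transformation; threading this through a loop that may re-run many times, while keeping all the bounds as functions of $m$ only, is where the real care is needed. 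Everything else is a consequence of lemmas already proved in this section.
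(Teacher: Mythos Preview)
Your proposal is correct and tracks the paper's proof closely: iterate the single-direction saturation lemmas until all three images are subgroups (using that images never shrink and each non-trivial step strictly enlarges one of them), then show the three subgroups coincide and conclude full support on $\{(a,b,c):a+b+c=0\}$.

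The one place you diverge is the final step. You propose making $\Gamma'\times\Phi'$ complete and then using pairwise connectedness to force $H_1=H_2=H_3$. The paper instead applies one more path trick \emph{on $z$} to make $\Sigma'\times\Gamma'$ full; this leaves $\Sigma'$ entirely untouched, so $\Sigma'=\Sigma^{\ell}$ is preserved for free, dissolving exactly the obstacle you flagged. From that fullness the paper reads off $H_1+H_2\subseteq H_3$ directly, and obtains the other two containments by reasoning about further (hypothetical) path tricks, which do not change the already-subgroup images. Your pairwise-connectedness argument for $H_1=H_2=H_3$ is a valid and arguably slicker alternative, but your route to $\Gamma'\times\Phi'$ being complete would go through a long path trick on $x$, and Lemma~\ref{lem:path_keeps_connectedness} does not guarantee the resulting $x$-alphabet is the full power $(\Sigma')^{\ell'}$; swapping to the paper's $z$-trick (which yields $\Sigma'\times\Gamma'$ full instead) fixes this cleanly.
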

\begin{proof}
  We proceed by an iterative process. Starting with the distribution $\mu$ and a master embedding for it into an Abelian group $(H,+)$,
  so long as the image of one of the embedding's component is not a sub-group, we apply either Lemma~\ref{lem:path_trick_saturates_x} or
  Lemma~\ref{lem:path_trick_saturates_z} to enlarge it
  (while clearly not decreasing the size of the image of the other two components), so eventually we get to a distribution $\nu$
  and master embeddings $\sigma'$, $\gamma'$ and $\phi'$ such that ${\sf Image}(\sigma') = H_1$, ${\sf Image}(\gamma') = H_2$ and
  ${\sf Image}(\phi') = H_3$ where $H_1,H_2$ and $H_3$ are subgroups of $H$. We then take the distribution $\mu'$ which is a result
  of applying the path trick of $\nu$ with respect to $z$ for $T = T(m)\in\mathbb{N}$ times for sufficiently large $T$, so that by
  Lemma~\ref{lem:path_keeps_connectedness} the distribution of $\mu'$ over the first two coordinates is full. Let the alphabets
  of $\mu'$ be $\Sigma'$, $\Gamma'$ and $\Phi'$; then by Lemmas~\ref{lem:path_trick_saturates_x},~\ref{lem:path_trick_saturates_z}
  we see that $\Sigma' = \Sigma^{\ell}$ for some $\ell = \ell(m)\in\mathbb{N}$.

  We now argue that $H_1 = H_2 = H_3$. By Lemma~\ref{lem:path_keeps_connectedness} we may apply the path trick on $\mu'$ to get $\mu''$
  in which the support on $\Sigma'\times \Gamma'$ is full, and we argue that this means that $H_1 + H_2 \subseteq H_3$.
  Indeed, we could pick any $h_1\in H_1$, $h_2\in H_2$ and find $x\in\Sigma'$ and $y\in\Gamma'$ such that
  $\sigma_{\sf master}'(x) = -h_1$ and $\gamma_{\sf master}'(y) = -h_2$, and thus find $\vec{z}\in \Phi''$ such that $(x,y,\vec{z})$
  is in the support of $\mu''$, so
  \[
  -h_1-h_2 + \phi_{\sf master}'^{\sharp}(\vec{z})
  =\sigma_{\sf master}'(x) + \gamma_{\sf master}'(y) + \phi_{\sf master}'^{\sharp}(\vec{z})
  =0
  \]
  by definition of embeddings, so $\phi_{\sf master}'^{\sharp}(\vec{z}) = h_1+h_2$, hence $h_1+h_2$ are in the image of $\phi'^{\sharp}(\vec{z})$,
  which is the same as the image of $\phi'$ (as it is already a subgroup), so $h_1 + h_2 \in H_3$.

  Thus, $H_1 + H_2 \subseteq H_3$ and analogously $H_1+H_3\subseteq H_2$ and $H_2+H_3\subseteq H_1$,
  and it follows that $H_1 = H_2 = H_3$. Thus, we can view $(\sigma_{{\sf master}}', \gamma_{{\sf master}}', \phi_{{\sf master}}')$
  as an embedding into $(H_1,+)$, so now it is a master embedding satisfying the third bullet. The first two bullets
  are clear, as the number of path trick applications is some constant depending only on the alphabet sizes (and the size of
  $H$, which also only depends on the alphabet sizes of $\mu$).
\end{proof}

\subsection{Conclusion of Section~\ref{sec:master_embed}}
Using Lemma~\ref{lem:iterate_path_trick_saturates} and Lemma~\ref{lem:from_mu_to_path} together, one gets that
an expectation as in~\eqref{eq:1} over $\mu$ can be upper bounded by (some power bounded away from $0$) of a similar looking expectation
over $\mu'$ in which the master embeddings are saturated; by applying a few more path tricks and using Lemma~\ref{lem:path_keeps_connectedness},
we can also ensure further connectedness properties of $\mu'$ (which we will need in the future).
Thus, we have gained further important properties of our $\mu$, at the expense of:
\begin{enumerate}
  \item The $y$-function and $z$-function may become completely different as a result of these operations (but they remain bounded).
  \item The $x$-function becomes more complicated. Indeed, our $x$-alphabet will be some $\Sigma' \subseteq \Sigma^{\ell}$, and our
  $x$-function will be given as
  \[
  F(x_1,\ldots,x_{\ell}) = \prod\limits_{i=1}^{\ell} f_{i}(x_i),
  \]
  where each $f_i$ is either the function $f$ or its complex conjugate $\overline{f}$.
\end{enumerate}
Thus, to prove Theorem~\ref{thm:main_stab_3}, it suffices to
(1) prove a version of that theorem under the additional assumptions we gained on $\mu'$;
(2) prove that a structural result for $F$ as in Theorem~\ref{thm:main_stab_3} implies a similar
structural result for $f$.
The majority of our effort will be to establish the first step: this part of the argument is contained in
Sections~\ref{sec:prep},~\ref{sec:max_merg_mot},~\ref{sec:base_case},~\ref{sec:reudce_to_homogenous},~\ref{sec:reduce_to_near_linear},~\ref{sec:prove_near_lin}
and~\ref{sec:the_hastad_argument}.
The second step will be a relatively easy consequence: this part of the argument is contained in Section~\ref{sec:unraveling}.
Thus, we arrive at the following statement which is the same as the statement of Theorem~\ref{thm:main_stab_3}, except that we have additional
assumptions on the distribution $\mu$:
\begin{thm}\label{thm:main_stab_3_saturated}
  For all $m\in\mathbb{N}$, $\alpha>0$ and $\eps>0$, there exists $d\in\mathbb{N}$ and $\eps'>0$ such that the following holds.
  Suppose that $\mu$ is a distribution over $\Sigma\times\Gamma\times \Phi$ such that:
  \begin{enumerate}
    \item The probability of each atom is at least $\alpha$.
    \item The size of each one of $\Sigma,\Gamma,\Phi$ is at most $m$.
    \item ${\sf supp}(\mu)$ is pairwise connected.
    \item There is a master embedding $(\sigma,\gamma,\phi)$ of $\mu$ into an Abelian group $(H,+)$ which is saturated,
    and the distribution of $(\sigma(x),\gamma(y),\phi(z))$ where $(x,y,z)\sim \mu$ has full support on $\{(a,b,c)\in H^3~|~a+b+c = 0\}$.
  \end{enumerate}
  Then, if $f\colon\Sigma^n\to \mathbb{C}$, $g\colon\Gamma^n\to \mathbb{C}$ and $h\colon\Phi^n\to \mathbb{C}$
  are $1$-bounded functions such that
  \[
  \card{\Expect{(x,y,z)\sim \mu^{\otimes n}}{f(x)g(y)h(z)}}\geq \eps,
  \]
  then there are $1$-bounded functions $u_1,\ldots,u_n\colon \Sigma\to \mathbb{C}$ and a function $L\colon \Sigma^n\to \mathbb{C}$ of degree at most $d$
  and $2$-norm at most $1$ such that
  \[
  \card{\Expect{x\sim \mu_x^{\otimes n}}{f(x)\cdot L(x)\prod\limits_{i=1}^{n}u_i(x_i)}}\geq \eps'.
  \]

  Furthermore, there are $\chi_i\in\hat{H}$ such that for all $i$, $u_i(x_i) = \chi_i(\sigma(x_i))$.
  Quantitatively, we have $d = {\sf poly}_{m,\alpha}\left(\frac{1}{\eps}\right)$ and $\eps' = 2^{-{\sf poly}_{m,\alpha}\left(\frac{1}{\eps}\right)}$.
\end{thm}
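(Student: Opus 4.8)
The plan is to follow the eight-step roadmap laid out in Section~\ref{sec:pf_overview}, which reduces the statement to a concrete Fourier-analytic computation on functions over an Abelian group, and then unwinds the reduction via the Restriction Inverse Theorem (Theorem~\ref{thm:rest_inverse_intro}). First I would set up the basis machinery from Step~3: using the saturated master embedding $(\sigma,\gamma,\phi)$, the functions $\{\chi\circ\sigma\}_{\chi\in\hat H}$ are linearly independent in $L_2(\Sigma;\mu_x)$ (saturation is exactly what makes this hold), so I complete them to an orthonormal basis $B_1\cup B_2$ of $L_2(\Sigma;\mu_x)$, with $B_2$ orthogonal to the embedding span; I do the analogous thing for $\Gamma$ and $\Phi$. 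Tensorizing gives a basis of $L_2(\Sigma^n;\mu_x^{\otimes n})$ in which each monomial has a well-defined \emph{embedding degree} and \emph{non-embedding degree}, and I would expand $f,g,h$ in these bases. The heart of the argument (Steps~4) is to show that the parts of $f,g,h$ supported on monomials of \emph{high} non-embedding degree contribute negligibly to $\Expect{\mu^{\otimes n}}{fgh}$; I would realize this softly via the non-embedding noise operator $\mathrm{T}_{\text{non-embed},\rho}$, so that $f' := \mathrm{T}_{\text{non-embed},\rho}^{\otimes n}f$ (and similarly $g',h'$) remains $1$-bounded, has almost all its mass on bounded non-embedding degree, and satisfies $\card{\Expect{\mu^{\otimes n}}{fgh}-\Expect{\mu^{\otimes n}}{f'g'h'}}\le \eps/100$.

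The bound on high non-embedding degree contributions is the main obstacle, and it is where the bulk of the work sits. The strategy is an induction on $n$: writing $f'$ (by homogeneity, after a preliminary reduction to embedding- and non-embedding-homogeneous functions) as $\sum_t \psi_t F_t(x_1,\dots,x_{n-1})F_t'(x_n)$ with $F_t'\in B_1\cup B_2$, and similarly for $g',h'$, one expands $\Expect{\mu^{\otimes n}}{f'g'h'}$ as in~\eqref{eq:intro_inductive} and shows that either $n$ drops by one with the non-embedding degree preserved, or else $n$ and the non-embedding degrees both drop by one and one gains a factor $1-\Omega(1)$; iterating yields $(1-\Omega(1))^{\mathsf{nedeg}}$. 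The key inputs are the \emph{additive base case} (for $n$ much larger than the non-embedding degree) and a \emph{relaxed base case} replacing the ideal inequality $\card{\Expect{\mu}{\tilde F\tilde G\tilde H}}\le 1-\Omega(1)$ for $\tilde F\in B_2$ (for the near-linear regime). The latter requires confronting the Horn-SAT obstruction: a non-embedding function that vanishes somewhere cannot be log-linearized, so I would use the path trick (as in Section~\ref{sec:master_embed}) to isolate a ``modest'' sub-alphabet $\Sigma_{\text{modest}}$ of size $\ge 2$ mapped to a single group element, on which no Horn-SAT embedding vanishes, and prove that variance $\ge\tau$ on $\Sigma_{\text{modest}}$ forces correlation $\le 1-\theta(\tau)$ with a quantitatively decent $\theta$. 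Crucially, this entire block is invoked with $\mu$ replaced by the saturated, fully-supported distribution furnished by hypothesis~(4), which is precisely what makes these base cases available.

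Once the high-non-embedding-degree mass is discarded, Steps~5--7 take over. I would apply a random restriction (to collapse the small positive non-embedding degree down to level $0$) followed by another application of $\mathrm{T}_{\text{non-embed}}$, obtaining functions $F''',G''',H'''$ of non-embedding degree exactly $0$ with $\card{\Expect{\nu^{\otimes n'''}}{F'''G'''H'''}}\ge\eps/2$; these descend to genuine functions $F^{\sharp},G^{\sharp},H^{\sharp}$ on $H^{n'''}$ against the pushed-forward distribution $\nu^{\sharp}$ on $\{(a,b,c):a+b+c=0\}$. A further random restriction moves $\nu^{\sharp}$ to the uniform distribution on $\{a+b+c=0\}$ (using the restriction trick of Section~\ref{sec:random_restrictions}), after which the classical linearity-testing Fourier computation gives $\card{\sum_\chi \widehat{F^{\sharp}}'(\chi)\widehat{G^{\sharp}}'(\chi)\widehat{H^{\sharp}}'(\chi)}\ge\eps/4$, hence some character $\chi$ with $\card{\widehat{F^{\sharp}}'(\chi)}\ge\eps/4$. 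Since $\chi\circ\sigma$ is a product function, the Restriction Inverse Theorem upgrades this to: $F^{\sharp}$ is correlated with $L\circ\sigma\cdot\chi\circ\sigma$ for a low-degree $L$ of $2$-norm $\le1$. I would then unwind the restrictions one layer at a time, each time re-randomizing to make $L$ effectively constant and re-invoking the Restriction Inverse Theorem, until concluding that $f$ itself is correlated with $L\cdot\chi\circ\sigma$ as claimed. Finally, Step~8 uses Mossel's theorem (\cite{Mossel}) on the connected distribution $\mu_x$ — whose full support we arranged — to handle the residual low-degree factor, giving the stated $\delta=2^{-\mathsf{poly}_{m,\alpha}(1/\eps)}$ and $d=\mathsf{poly}_{m,\alpha}(1/\eps)$, with the quantitative bounds tracked through each random-restriction and noise-operator step.
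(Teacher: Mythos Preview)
Your proposal follows essentially the same architecture as the paper: the reduction to Theorem~\ref{thm:nonembed_deg_must_be_small} via the non-embedding noise operator (Section~\ref{sec:the_hastad_argument}), the inductive bound on high non-embedding degree contributions with the additive and relaxed base cases (Sections~\ref{sec:base_case}--\ref{sec:prove_near_lin}), and the unwinding via the Restriction Inverse Theorem.

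However, your final sentence is misplaced. Step~8 of the overview (Section~\ref{sec:unraveling}) is \emph{not} part of the proof of Theorem~\ref{thm:main_stab_3_saturated}; it is the derivation of Theorem~\ref{thm:main_stab_3} \emph{from} Theorem~\ref{thm:main_stab_3_saturated}. In particular, the appeal to Mossel's result on a connected distribution $\mu_x$ ``whose full support we arranged'' has no place here: no such full-support property of $\mu_x$ is assumed or arranged in Theorem~\ref{thm:main_stab_3_saturated}, and the low-degree function $L$ is explicitly part of the conclusion, not a residual factor to be eliminated. Your own clause ``until concluding that $f$ itself is correlated with $L\cdot\chi\circ\sigma$ as claimed'' is already the endpoint --- the proof stops there (see the final claim in Section~\ref{sec:the_hastad_argument}). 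The full-support-of-$\Sigma'$ arrangement and Mossel's theorem enter only after one has applied path tricks to pass from $\mu$ to $\mu'$ (obtaining a product-form $F$ from $f$) and then needs to unwind the product structure of $F$ back to a statement about $f$; that is the content of Section~\ref{sec:unraveling}, which proves Theorem~\ref{thm:main_stab_3}.
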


\subsection{Embeddings Into the Infinite Cyclic Group}
So far we have discussed embeddings of a distribution into finite Abelian groups, however it also makes sense
to consider embeddings into infinite groups. Specifically, we will need to consider embeddings of a distribution into the
infinite cyclic group $([0,1),\pmod{1})$. Using approximation arguments (and more specifically, Dirichlet's Approximation Theorem),
we show in the following lemma that any embedding of a distribution into $([0,1),\pmod{1})$
is \emph{equivalent} to an embedding into a finite Abelian group, hence there is nothing particularly special about them.
For this, we first define the notion of equivalence.

\begin{definition}
  Let $\mu$ be a distribution over $\Sigma\times\Gamma\times \Phi$, and let
  $(\sigma,\gamma,\phi)$ and $(\sigma',\gamma',\phi')$ be Abelian embeddings
  of $\mu$.
  We say $(\sigma,\gamma,\phi)$ is equivalent to $(\sigma',\gamma',\phi')$ if
  there are bijective maps
  $m_1\colon {\sf Image}(\sigma)\to {\sf Image}(\sigma')$,
  $m_2\colon {\sf Image}(\gamma)\to {\sf Image}(\gamma')$
  and
  $m_3\colon {\sf Image}(\phi)\to {\sf Image}(\phi')$
  such that $\sigma'(x) = m_1(\sigma(x))$,
  $\gamma'(y) = m_2(\gamma(y))$ and
  $\phi'(z) = m_3(\phi(z))$ for all
  $x\in\Sigma$, $y\in \Gamma$, $z\in \Phi$.
\end{definition}

\begin{lemma}\label{lem:turn_infinite_to_finite}
  Let $\Sigma$, $\Gamma$ and $\Phi$ be finite alphabets and let $\mu$ be a distribution over $\Sigma\times\Gamma\times \Phi$.
  If $\sigma\colon \Sigma\to [0,1)$, $\gamma\colon \Gamma\to[0,1)$ and $\phi\colon \Phi\to [0,1)$
  is an embedding of $\mu$ into $([0,1),+\pmod{1})$, then
  $(\sigma,\gamma,\phi)$ is equivalent to an Abelian embedding of $\mu$ into a finite Abelian group.
\end{lemma}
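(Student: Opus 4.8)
The plan is to use Dirichlet's simultaneous approximation theorem to "round" the real-valued embedding to a rational-valued one, and then clear denominators to land in a finite cyclic group. Let $\Sigma=\{x_1,\dots,x_{m_1}\}$, $\Gamma=\{y_1,\dots,y_{m_2}\}$, $\Phi=\{z_1,\dots,z_{m_3}\}$, and collect all the embedding values into a single real vector $v=(\sigma(x_1),\dots,\phi(z_{m_3}))\in[0,1)^{N}$, $N=m_1+m_2+m_3$. First I would record the \emph{linear-algebraic} content of being an embedding: the conditions $\sigma(x)+\gamma(y)+\phi(z)\equiv 0\pmod 1$ for $(x,y,z)\in{\sf supp}(\mu)$ say exactly that $Mv\in\mathbb{Z}^{r}$, where $M$ is the $\{0,1\}$ incidence matrix of the support (one row per atom, with a $1$ in the three columns indexed by $x,y,z$) and $r=|{\sf supp}(\mu)|$. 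The key point is that this is a closed condition stable under small perturbations \emph{only if} the perturbation keeps $Mv$ integral; so we cannot perturb arbitrarily, and the real subtlety is to perturb $v$ to a rational vector while staying on the affine lattice $\{w : Mw\equiv Mv \pmod{\mathbb{Z}^r}\}$ and, crucially, \emph{without merging or splitting any of the fibers} of $v$ (i.e. preserving which coordinates are equal and which are distinct), so that the resulting map is equivalent to the original in the sense of the definition just stated.

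The main step is therefore the following. Consider the real numbers $\{\sigma(x_i)\}$, $\{\gamma(y_j)\}$, $\{\phi(z_k)\}$ together with all their pairwise differences within each alphabet; there is some $\eta>0$ smaller than the minimal nonzero gap among these, so that any perturbation of each value by less than $\eta/3$ preserves all equalities and inequalities among the $\sigma$-values (and likewise for $\gamma$, $\phi$). By Dirichlet's theorem, for every $Q\in\mathbb{N}$ there is an integer $q\le Q^{N}$ and integers $p_1,\dots,p_N$ with $|q v_\ell - p_\ell|\le 1/Q$ for all $\ell$; equivalently $\|q v - p/q\|_\infty \le 1/(qQ)$ is not quite what we want, so instead I would phrase it as: there exists $q\in\mathbb{N}$ such that $\|qv - (\text{nearest integer vector})\|_\infty$ is as small as we like, i.e. $qv$ is within $\eta q/3$... — the clean way is to pick $q$ so that each $q v_\ell$ is within $1/Q$ of an integer $n_\ell$, and set $\sigma'(x_i)=n_{\ell(i)}\bmod q$ viewed in $\mathbb{Z}_q$, similarly $\gamma'$, $\phi'$. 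Then $\sigma'(x_i)/q$ is within $1/(qQ)$ of $\sigma(x_i)$, so choosing $Q$ large (depending on $\eta$ and hence only on $v$) the rounded values $\sigma'(x_i)/q,\gamma'(y_j)/q,\phi'(z_k)/q$ have exactly the same pattern of equalities as the originals. It remains to check $\sigma'(x)+\gamma'(y)+\phi'(z)\equiv 0 \pmod q$ for each atom: the real sum $\sigma(x)+\gamma(y)+\phi(z)$ is an integer (it's $\equiv 0 \bmod 1$ and lies in $[0,3)$, so it is $0,1$, or $2$), hence $q(\sigma(x)+\gamma(y)+\phi(z))$ is an integer multiple of $q$, while $\sigma'(x)+\gamma'(y)+\phi'(z)$ differs from it by at most $3/Q<1$ in absolute value and is itself an integer, forcing it to equal that multiple of $q$, i.e. $\equiv 0\pmod q$. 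Thus $(\sigma',\gamma',\phi')$ is an embedding of $\mu$ into $(\mathbb{Z}_q,+)$.

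Finally I would produce the bijections witnessing equivalence: since the pattern of equalities among $\{\sigma(x_i)\}$ coincides with that among $\{\sigma'(x_i)\}$, the assignment $\sigma(x_i)\mapsto \sigma'(x_i)$ is a well-defined bijection ${\sf Image}(\sigma)\to{\sf Image}(\sigma')$, and likewise for $\gamma,\phi$; by construction these maps intertwine the two embeddings coordinatewise, which is exactly the definition of equivalence. Since $q$ depends only on $\eta$, which depends only on the finitely many values of the original embedding, everything is finite and we are done. The step I expect to be the main obstacle — or at least the one requiring the most care — is the simultaneous control of \emph{all three} roundings so that the single exact integrality relation $\sigma(x)+\gamma(y)+\phi(z)=0\pmod 1$ survives the rounding; the trick above (the real sum is a bounded integer, the rounded sum is a nearby integer, hence they are equal) is what makes this go through cleanly, and it is important that we round with a \emph{common} denominator $q$ so that all three embeddings live in the \emph{same} group $\mathbb{Z}_q$. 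One should also double-check the edge case where some embedding values are irrational and rationally independent, which is precisely where Dirichlet (rather than a naive "take a common denominator") is essential.
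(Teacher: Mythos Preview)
Your proposal is correct and follows essentially the same approach as the paper: apply Dirichlet's simultaneous approximation theorem to the finitely many embedding values to find a common denominator $q$, round to $\mathbb{Z}_q$, and verify both that the embedding relation survives (the integer sum argument) and that the pattern of equalities among values is preserved (the gap argument). The only point where you are slightly looser than the paper is that your ``minimal nonzero gap'' $\eta$ should be taken as a distance modulo $1$ rather than in $[0,1)$ (the paper writes $\alpha=\min_{x,x':\sigma(x)\neq\sigma(x')}\min_{z\in\mathbb{Z}}|z+\sigma(x)-\sigma(x')|$), since otherwise values near $0$ and near $1$ could collide after reduction mod $q$; with that adjustment your argument goes through verbatim.
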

\begin{proof}
Consider the set of numbers $S = {\sf Image}(\sigma)\cup {\sf Image}(\phi) \cup {\sf Image}(\gamma)$,
  let $r = \card{\Sigma} +\card{\Phi} + \card{\Gamma}$ and let $N = N(r)\in\mathbb{N}$ to be determined.
  Then $\card{S}\leq r$, so by Dirichlet's approximation theorem we may find integers $p_i, q$ such that for each $s_i\in S$ we have that
  $\card{s_i - \frac{p_i}{q}}\leq \frac{1}{q N^{1/r}}$. Let
  \[
  \alpha = \min_{x,x' \sigma(x)\neq \sigma(x')}\min_{z\in\mathbb{Z}}\card{z+\sigma(x)-\sigma(x')}.
  \]
  We choose
  $N = \left(\frac{3}{\alpha}\right)^r$, define $\sigma'$ by $\sigma'(x) = \frac{p_i}{q}\pmod{1}$ if $\sigma(x) = s_i$, and similarly define $\phi',\gamma'$.
  \begin{enumerate}
    \item First, we show that $\sigma', \gamma', \phi'$ is an embedding. Fix $(x,y,z)\in{\sf supp}(\mu)$; then we have
    \[
    \sigma'(x) + \phi'(y) + \gamma'(z) =
    \sigma(x)+\phi(y) + \gamma(z) + \Delta,
    \]
    where $\card{\Delta}\leq \frac{3}{q N^{1/r}}$. Noting that $\sigma(x) + \phi(y) + \gamma(z)$ is an integer (as it is $0$ mod $1$), it
    follows that $\sigma'(x) + \phi'(y) + \gamma'(z)$ is very close to an integer, up to $\frac{3}{q N^{1/r}} < \frac{1}{q}$. On the other hand, by definition
    of $\sigma',\phi',\gamma'$, it is a number of the form $P/q$ for some integer $P$, hence it can either be an integer or at least $\frac{1}{q}$ far from all
    integers. It follows that it is an integer, so $\sigma'(x) + \phi'(y) + \gamma'(z) = 0\pmod{1}$.

    \item Second, we argue that $(\sigma',\gamma',\phi')$ is equivalent to $(\sigma,\gamma,\phi)$. For that, we have to argue that
    $\sigma(x)\neq \sigma(x')$ if and only if $\sigma'(x)\neq \sigma'(x')$. If $\sigma(x) = \sigma(x')$ then it is clear that
    $\sigma'(x) = \sigma'(x')$ by definition. If $\sigma(x)\neq \sigma(x')$, then by the definition of $\alpha$ we get that
    $\sigma(x) - \sigma(x')$ is at least $\alpha$-far from all integers, and as $\card{\sigma(x) - \sigma'(x')}\leq \frac{\alpha}{3}$,
    it follows that $\sigma'(x) - \sigma'(x')$ is at least $\alpha/3$ far from all integers, and in particular from $0$,
    so $\sigma'(x)\neq \sigma'(x')$.
  \end{enumerate}
  In conclusion, we get that $(\sigma,\gamma,\phi)$ and $(\sigma',\gamma',\phi')$, and noting that, after multiplying by $q$, the latter is an embedding
  into $(\mathbb{Z}_q,+)$, the proof is concluded.
\end{proof}

\section{Non-embedding Degrees and Partial Bases}\label{sec:prep}
In this section we make progress towards the proof of Theorem~\ref{thm:main_stab_3_saturated}, and state Theorem~\ref{thm:nonembed_deg_must_be_small} which is a related
by weaker form. The proof of Theorem~\ref{thm:nonembed_deg_must_be_small} then spans
Sections~\ref{sec:base_case},~\ref{sec:reudce_to_homogenous},~\ref{sec:reduce_to_near_linear},~\ref{sec:prove_near_lin},
and the derivation of Theorem~\ref{thm:main_stab_3_saturated} from Theorem~\ref{thm:nonembed_deg_must_be_small} is done in Section~\ref{sec:the_hastad_argument}.
\subsection{A Motivating Case}\label{sec:motivating_all_are_H}
Let $\mu$ be a distribution over $\Sigma\times \Gamma\times \Phi$ as in Theorem~\ref{thm:main_stab_3_saturated}, and let $\sigma$, $\gamma$, $\phi$
be a saturated master embedding of $\mu$ into $(H,+)$. To motivate the discussion, below we begin by considering
a motivating example in which the master embeddings partition the alphabets into singletons.

Namely, suppose that for each $h\in H$ each one of $\sigma^{-1}(h)$, $\gamma^{-1}(h)$ and $\phi^{-1}(h)$ has size exactly $1$. In that case the
master embeddings form an identification between our alphabets and the group $H$, hence what we really have in our hands is $3$ functions,
$f^{\sharp}\colon H^n\to\mathbb{C}$, $g^{\sharp} \colon H^{n}\to\mathbb{C}$ and $h^{\sharp}\colon H^n\to\mathbb{C}$ defined as
\begin{align*}
&f^{\sharp}(h_1,\ldots,h_n) = f(\sigma^{-1}(h_1),\ldots,\sigma^{-1}(h_n)),
\qquad
g^{\sharp}(h_1,\ldots,h_n) = g(\gamma^{-1}(h_1),\ldots,\gamma^{-1}(h_n)),\\
&\qquad\qquad\qquad\qquad\qquad\qquad
h^{\sharp}(h_1,\ldots,h_n) = h(\phi^{-1}(h_1),\ldots,\phi^{-1}(h_n)).
\end{align*}
Thus, considering the distribution $\nu$ over $H^3$ which is the distribution of $(\sigma(x),\gamma(y),\phi(z))$ where $(x,y,z)\sim \mu$,
we get that
\[
\card{\Expect{(x^{\sharp},y^{\sharp},z^{\sharp})\sim \nu^{\otimes n}}{f^{\sharp}(x^{\sharp})g^{\sharp}(y^{\sharp})h^{\sharp}(z^{\sharp})}}
=
\card{\Expect{(x,y,z)\sim \mu^{\otimes n}}{f(x)g(y)h(z)}}\geq \eps.
\]
Thus, we have transformed our question into an equivalent question over Abelian groups. As $\mu$ is pairwise connected, $\nu$ is also
pairwise connected, and by definition of the master embedding it follows that its support is contained in $S = \{(x^{\sharp},y^{\sharp},z^{\sharp})\in H^3~|~x^{\sharp}+y^{\sharp}+z^{\sharp}=0\}$. Combining the pairwise connectedness and the fact that the master embedding is saturated, it follows that
the support of $\nu$ is precisely $S$. Using other ideas (based on random restrictions) we can ensure that the distribution $\nu$ is actually uniform
over $S$, in which case we have reduced the problem to a well-known Fourier analytic computation (which appears in many places, such as Roth's theorem~\cite{Roth,Meshulam}
as well as in theoretical computer science~\cite{BLR,Has01}). In particular, one can show that there is a Fourier character $\chi\in \hat{H}^{n}$ such
that $\card{\widehat{f^{\sharp}}(\chi)}\geq \eps$, and translating this back into information about the function $f$ one gets the conclusion of
Theorem~\ref{thm:main_stab_3_saturated} with the low-degree part $L$ being the constant $1$ function. We remark that even in this simplistic argument,
the presence of the low-degree function ultimately comes from the step in which we switched from the distribution $\nu$ to the uniform distribution over $S$.
Nevertheless, we encourage the reader to ignore this point for now.

Our goal in this, and in the several subsequent sections will be to show that while in general, it need not be the case that $\sigma$, $\gamma$ and $\phi$
completely partition their respective alphabets, the only functions $f$, $g$ and $h$ for which the expectation in Theorem~\ref{thm:main_stab_3_saturated} may be
have a special property. Specifically, we show that such the function $f$ ``hardly distinguish'' between two input symbols $x$ and $x'$ that are mapped to the
same group element by the master embedding component $\sigma$ (and similarly for $g$ and $h$). Towards this end, in this section
we first define a partial basis for the set of functions composed of functions that only depend on the values of the master embeddings, and then
complete them to bases. We then define the notions of ``embedding degree'' and ``non-embedding degree''. These are notions that capture how well
does our function $f$ distinguish between inputs that are mapped to the same group element by the master embedding.
With these notions, we show that $f,g,h$ for which the expectation in Theorem~\ref{thm:main_stab_3_saturated}
is large, must have small non-embedding degree.

\subsection{Setting Up a Partial Basis via Saturated Embeddings, and Non-embedding Degrees}\label{sec:partial_basis}
Let $\mu$ be a distribution over $\Sigma\times \Gamma\times \Phi$ as in Theorem~\ref{thm:main_stab_3_saturated}
and let $\sigma$, $\gamma$, $\phi$ be saturated master embeddings into $(H,+)$. In this section, we explain how
to use these embeddings to define useful partial bases for the spaces of functions we are dealing with, as well
as how to define the notion the related notion of non-embedding degree.  For the sake of concreteness, we shall
phrase everything in the language of functions of $x$ and the alphabet $\Sigma$, however everything holds for
the other two variables and alphabets as well.

Consider the space $L_2(\Sigma,\mu_x)$, and note that we may set up a partial basis for them using characters over $H$
and the master embeddings.
\begin{definition}
  Given a finite set $\Sigma$, an Abelian group $(H,+)$ and $\sigma\colon \Sigma\to H$,
  for each $\chi\in\hat{H}$, we define $\chi_{\sigma}\colon \Sigma\to \mathbb{C}$ by
  $\chi_{\sigma}(x) = \chi(\sigma(x))$.
\end{definition}
We note that in our setting, the set $\{\chi_{\sigma}\}_{\chi\in \hat{H}}$ is a linearly independent set. Indeed,
to observe note that as $\sigma$ is saturated, it is enough to show that there is a distribution $\mathcal{D}$ over $\Sigma$
in which $\chi_{\sigma}$ is an orthonormal set, and we consider a distribution $\mathcal{D}$ over $\Sigma$ such that
$\sigma(x)$ is distributed uniformly in $H$ when $x\sim \mathcal{D}$ (this is clearly possible). In that case, for all
$\chi,\chi'\in \hat{H}$ we have that $\inner{\chi_{\sigma}}{\chi'_{\sigma}}_{\mathcal{D}} = \inner{\chi}{\chi'} = 1_{\chi = \chi'}$,
as required.
\begin{definition}
  Given a finite set $\Sigma$, an Abelian group $(H,+)$ and $\sigma\colon \Sigma\to H$, we define
  ${\sf Embed}_{\sigma}(\mu) \subseteq \set{f\colon \Sigma\to\mathbb{C}}$ by
  \[
    {\sf Embed}_{\sigma}(\mu) = {\sf Span}\left(\sett{\chi_{\sigma}}{\chi\in\hat{H}}\right).
  \]
\end{definition}

Thus, in our setting we have ${\sf Embed}_{\sigma}(\mu)\subseteq L_2(\Sigma; \mu_x)$, ${\sf Embed}_{\gamma}(\mu) \subseteq L_2(\Gamma; \mu_y)$
and ${\sf Embed}_{\phi}(\mu) \subseteq L_2(\Phi; \mu_z)$. These spaces capture the space of embedding functions; an $x$ function $f$ is called
an embedding function if $f(x)$ only depends on $\sigma_{\sf master}(x)$.
\begin{claim}\label{claim:embedding_fns}
  Suppose that $\mu$ is a distribution over $\Sigma\times\Gamma\times \Phi$ in which the master embeddings are saturated,
  and let $f\colon \Sigma\to\mathbb{C}$, $g\colon\Gamma\to\mathbb{C}$ and $h\colon \Phi\to\mathbb{C}$ be functions such that
  \[
        f(x) + g(y) + h(z) = 0
  \]
  for all $(x,y,z)\in {\sf supp}(\mu)$. Then $f\in {\sf Embed}_{\sigma}(\mu)$.
\end{claim}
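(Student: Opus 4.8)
The plan is to recognize that the hypothesis $f(x)+g(y)+h(z)=0$ on ${\sf supp}(\mu)$ says precisely that $(f,g,h)$ is an Abelian embedding of $\mu$, and to play this against the universal property of the master embedding. First I would normalize. Fix a point $(x_0,y_0,z_0)\in{\sf supp}(\mu)$ and replace $(f,g,h)$ by $(f-f(x_0),\,g-g(y_0),\,h-h(z_0))$; since $f(x_0)+g(y_0)+h(z_0)=0$ the three subtracted constants sum to $0$, so the new triple still satisfies $f(x)+g(y)+h(z)=0$ on ${\sf supp}(\mu)$, and now $0$ lies in the image of each of the three maps. This substitution is harmless for the conclusion: $f$ and its shift differ by a constant, and constant functions lie in ${\sf Embed}_{\sigma}(\mu)={\sf Span}\{\chi_{\sigma}:\chi\in\hat H\}$ (take $\chi$ trivial). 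So I may assume $(f,g,h)$ is an Abelian embedding of $\mu$ into $(\mathbb{C},+)$ with $0$ in each image; equivalently, it is an Abelian embedding into the subgroup $A\leq(\mathbb{C},+)$ generated by the finitely many values of $f,g,h$, and $A$ is finitely generated and torsion-free, hence free of finite rank.

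Next I would invoke the master embedding. Write the given saturated master embedding as $(\sigma,\gamma,\phi)=\prod_{i=1}^{s}(\sigma_i,\gamma_i,\phi_i)$ into $H=\prod_{i=1}^{s}H_i$, each $H_i$ a finite Abelian group. Because $\mu$ admits a master embedding it admits no non-trivial Abelian embedding into $(\mathbb{Z},+)$ (such an embedding has no linear reduction to a finite group, as $\mathbb{Z}$ has no non-trivial finite subgroup), so Lemma~\ref{lem:exhaust_embed}, and hence the defining property of the master embedding from Lemma~\ref{lem:master_captures_all}, applies to $(f,g,h)$: there is an index $i$ such that $(\sigma_i,\gamma_i,\phi_i)$ is a linear reduction of $(f,g,h)$. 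By the definition of linear reduction this gives an injective homomorphism $m_1\colon H_i\to A$ with $f=m_1\circ\sigma_i$. But $H_i$ is finite and $A$ is torsion-free, so $m_1(H_i)=\{0\}$; thus $H_i=\{0\}$, $\sigma_i$ is identically $0$, and therefore $f=m_1\circ\sigma_i\equiv 0$. Hence the normalized $f$ vanishes, so the original $f$ is constant, and in particular $f\in{\sf Embed}_{\sigma}(\mu)$ (and the same argument shows $g$ and $h$ are constant as well).

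The one point requiring care is the interface between the master-embedding formalism, which is set up around embeddings into \emph{finite} Abelian groups, and the fact that $(f,g,h)$ a priori takes values in the \emph{infinite} group $(\mathbb{C},+)$. The normalization step together with the observation that the subgroup $A$ generated by the values is finitely generated and torsion-free is exactly what bridges this gap: it is what makes Lemma~\ref{lem:exhaust_embed}/Lemma~\ref{lem:master_captures_all} applicable and then forces the relevant component $H_i$ to be trivial. Everything else is formal.
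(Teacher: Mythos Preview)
Your underlying idea is sound and in fact proves more than the claim asks ($f$ is outright constant, not just constant on the fibers of $\sigma_{{\sf master}}$), but the route through Lemma~\ref{lem:exhaust_embed} and Lemma~\ref{lem:master_captures_all} does not go through as written. Those lemmas, as proved in the paper, treat only embeddings into \emph{finite} Abelian groups: the proof of Lemma~\ref{lem:exhaust_embed} begins by invoking the structure theorem for finite Abelian groups to decompose $H$, and this is inapplicable to your $A\cong\mathbb{Z}^k$. You correctly identify this interface as the delicate point, but the observation that $A$ is finitely generated and torsion-free does not make those lemmas applicable; it is precisely the observation you should use \emph{instead} of them. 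The fix is to skip the master embedding entirely: once $A\cong\mathbb{Z}^k$, project to each coordinate to obtain $k$ embeddings of $\mu$ into $(\mathbb{Z},+)$, each of which must be trivial by the standing hypothesis, whence the normalized $f$ vanishes.

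For comparison, the paper proceeds quite differently. It takes real and imaginary parts separately, scales into $[-1,1)$ to view $(f,g,h)$ as an embedding into the circle group $([-1,1),+\pmod 2)$, and invokes Lemma~\ref{lem:turn_infinite_to_finite} (a Dirichlet approximation argument) to produce an \emph{equivalent} embedding $(f',g',h')$ into a genuinely finite Abelian group. Only then does it apply the master embedding, concluding that $f'$, and hence $f$, is constant on the fibers of $\sigma_{{\sf master}}$. Your (patched) route is more elementary---it avoids Dirichlet and the detour through the circle---and uses the no-$(\mathbb{Z},+)$-embedding hypothesis directly, yielding the stronger conclusion that $f$ is constant.
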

\begin{proof}
  Taking real and imaginary parts separately, it suffices to prove the statement for real valued functions.
  Multiplying $f,g,h$ by small enough constant, we may assume that $\card{f(x)},\card{g(y)},\card{h(z)}\leq 1$
  for all $x,y,z$, hence $f,g,h$ form an embedding of $\mu$ into $([-1,1),+\pmod{2})$. By Lemma~\ref{lem:turn_infinite_to_finite},
  it follows that there are $m_1,m_2,m_3$ injectives such that $f' = m_1(f)$, $g' = m_2(g)$ and $h' = m_3(h)$ is an embedding
  of $\mu$ into a finite Abelian group. By the definition of the master embedding if follows that $f'$ is constant on each part of the
  partition on $\Sigma$ induced by the master embedding, and so $f'(x) = f'(x')$ if $\sigma_{{\sf master}}(x) = \sigma_{{\sf master}}(x')$.
  Since $m_1$ is injective, it follows that the same is true for $f$, and so $f\in {\sf Embed}_{\sigma}(\mu)$.
\end{proof}

The above motivating example is just the case that the $3$ containments ${\sf Embed}_{\sigma}(\mu)\subseteq L_2(\Sigma; \mu_x)$, ${\sf Embed}_{\gamma}(\mu) \subseteq L_2(\Gamma; \mu_y)$
and ${\sf Embed}_{\phi}(\mu) \subseteq L_2(\Phi; \mu_z)$
are in fact equalities. This need not be necessarily the case for us, hence we may need to complete these sets to get all of $ L_2(\Sigma; \mu_x)$.
More precisely, let $B_1\subseteq {\sf Embed}_{\sigma}(\mu)$ be an orthonormal basis for ${\sf Embed}_{\sigma}(\mu)$ (with respect to the inner
product in $L_2(\Sigma,\mu_x)$). We complete it to an orthonormal basis for $L_2(\Sigma,\mu_x)$ by adding the set $B_2\subseteq L_2(\Sigma; \mu_x)$.
\begin{definition}
  A monomial in $L_2(\Sigma; \mu_x)$ is one of the basis functions from $B_1\cup B_2$. A monomial
  in $L_2(\Sigma^n,\mu_x^{\otimes n})$ is $\prod\limits_{i=1}^{n} u_i$ where $u_i\in B_1\cup B_2$
  for all $i$.
\end{definition}

\begin{definition}\label{def:degrees_early}
  The non-embedding degree of $u$, ${\sf nedeg}(u)$, is the number of $i$'s for which $u_i\in B_2$.
\end{definition}

With these notions, we may write any $f\colon \Sigma^n\to\mathbb{C}$ as
\[
f(x) = \sum\limits_{\vec{u} = (u_1,\ldots,u_n)\in (B_1\cup B_2)^{n}}\widehat{f}(\vec{u})\prod\limits_{i=1}^{n}u_i(x_i),
\]
where $\widehat{f}(\vec{u}) = \inner{f}{\vec{u}}$. Each monomial of $f$ has its non-embedding degree, and we will want to define
a notion of non-embedding degree which captures the mass of $f$ on low non-embedding degree monomials and is convenient to work with, and towards
this end we define the non-embedding noise stability of a function.

\subsection{The Non-embedding Noise Stability of a Function}
In this section, we define the notion of non-embedding stability, which will be a crucial tool for us to measure the degree of a function with respect to ``non-embedding
functions''. We also state a few basic properties of it that will be used later on in our arguments.

For a parameter $\xi>0$ and a distribution $\mathcal{D}$ over $\Sigma$, consider the Markov chain $\mathrm{T}_{\text{non-embed}, 1-\xi, \mathcal{D}, \mu}$ on $\Sigma$ that
on $x\in \Sigma$, with probability $1-\xi$ stays in $x$, and otherwise samples $x'\sim \mathcal{D}$ conditioned on $\sigma(x') = \sigma(x)$. It will most often be the case for
us that $\mathcal{D} = \mu_x$, however this operator depends on $\mu$ as a whole (as it depends on the master embedding of it), we chose to include both in the notations.
There will be some rare exceptions though,
in which case we will make the notations explicit. Otherwise, to simplify notations we will often drop $\mathcal{D}$ from the notation (with the understanding that
it is just $\mu_x$).

Observe that $\mu_x$ is a stationary distribution for $\mathrm{T}_{\text{non-embed}, 1-\xi, \mu}$. Thus we can think of
$\mathrm{T}_{\text{non-embed}, 1-\xi, \mu}$ as an operator acting on $L_2(\Sigma; \mu_x)$ as
\[
\mathrm{T}_{\text{non-embed}, 1-\xi,\mu} f(x) = \Expect{x'\sim \mathrm{T}_{\text{non-embed}, 1-\xi} x}{f(x')}.
\]
With this in mind, we may define the non-embedding stability of $f$ as follows:
\begin{definition}\label{def:nestab}
  The non-embedding $\xi$-noise stability of $f\colon (\Sigma^n; \mathcal{D}^{\otimes n})\to \mathbb{C}$ is defined as
  \[
  {\sf NEStab}_{1-\xi, \mu}(f;\mathcal{D}^{\otimes n}) = \inner{f}{\mathrm{T}_{\text{non-embed}, 1-\xi,\mathcal{D},\mu}^{\otimes n} f}.
  \]
\end{definition}
We note that the operator $\mathrm{T}_{\text{non-embed}, 1-\xi,\mu}$ depends on the distribution $\mu$ itself and not only on its marginal on $x$,
as it is defined using the master embedding of $\mu$; the same goes for the non-embedding noise stability of a function.
Nevertheless, and to simplify notations we will often omit $\mu$ from notations when it is clear, and denote the operator
by $\mathrm{T}_{\text{non-embed}, 1-\xi}$ and the corresponding notion of noise stability by ${\sf NEStab}_{1-\xi}(f;\mu_x^{\otimes n})$.

\subsubsection{Diagonalizing the Non-embedding Stability Operator}
The basis functions $B_1,B_2$ defined earlier are eigenfunctions of the operator $\mathrm{T}_{\text{non-embed}, 1-\xi,\mu}$, and the following
fact gives us their eigenvalues:
\begin{fact}\label{fact:soft_nonbembed_op}
  Suppose that $u\colon \Sigma^n\to\mathbb{C}$ is a monomial of non-embedding degree equal to $d$. Then
  \[
  \mathrm{T}_{\text{non-embed}, 1-\xi,\mu}^{\otimes u} = (1-\xi)^{d} u.
  \]
\end{fact}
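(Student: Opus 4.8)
The plan is to diagonalize the univariate operator $\mathrm{T}_{\text{non-embed}, 1-\xi, \mu}$ first, and then lift the computation to the tensor power using the product structure of monomials. First I would write
$\mathrm{T}_{\text{non-embed}, 1-\xi, \mu} = (1-\xi)\mathrm{I} + \xi \mathrm{P}$,
where $\mathrm{P}\colon L_2(\Sigma; \mu_x)\to L_2(\Sigma; \mu_x)$ is the averaging operator $\mathrm{P}v(x) = \Expect{x'\sim\mu_x}{v(x')\mid \sigma(x') = \sigma(x)}$; this is immediate from the definition of the Markov chain. The key observation is that $\mathrm{P}$ is precisely the orthogonal projection of $L_2(\Sigma; \mu_x)$ onto ${\sf Embed}_{\sigma}(\mu)$: indeed $\mathrm{P}$ is the conditional expectation onto the $\sigma$-algebra generated by the master embedding component $\sigma$, hence it is self-adjoint and idempotent with range equal to the space of functions that depend only on $\sigma(x)$; since $\sigma$ is saturated, this latter space is exactly ${\sf Span}(\{\chi_\sigma\}_{\chi\in\hat{H}}) = {\sf Embed}_{\sigma}(\mu)$.

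From here the univariate eigenvalue computation is routine. Any $b\in B_1\subseteq{\sf Embed}_{\sigma}(\mu)$ satisfies $\mathrm{P}b = b$ and hence $\mathrm{T}_{\text{non-embed}, 1-\xi, \mu}b = b$, i.e.\ $b$ is an eigenfunction with eigenvalue $1$; any $b\in B_2$ is orthogonal to ${\sf Embed}_{\sigma}(\mu)$, so $\mathrm{P}b = 0$ and $\mathrm{T}_{\text{non-embed}, 1-\xi, \mu}b = (1-\xi)b$, an eigenfunction with eigenvalue $1-\xi$. Finally, writing a monomial $u$ of non-embedding degree $d$ as $u(x) = \prod_{i=1}^{n}u_i(x_i)$ with each $u_i\in B_1\cup B_2$ and exactly $d$ of the indices $i$ having $u_i\in B_2$, and using that the tensor operator acts coordinatewise, we get $\mathrm{T}_{\text{non-embed}, 1-\xi, \mu}^{\otimes n}u(x) = \prod_{i=1}^{n}\left(\mathrm{T}_{\text{non-embed}, 1-\xi, \mu}u_i\right)(x_i)$, and each factor contributes the scalar $1$ (if $u_i\in B_1$) or $1-\xi$ (if $u_i\in B_2$). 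Multiplying these scalars gives $(1-\xi)^{d}u$, as claimed.

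I do not anticipate any genuine obstacle here; this is one of the easy structural facts in the section. The only point that deserves a careful sentence is the identification of the range of the fiberwise averaging operator $\mathrm{P}$ with ${\sf Embed}_{\sigma}(\mu)$ — this is exactly where saturation of the master embedding is used, since without it the $\sigma$-measurable functions need not be spanned by $\{\chi_\sigma\}_{\chi\in\hat{H}}$ — and the accompanying remark that conditional expectation is the orthogonal projection onto the corresponding subspace of $L_2(\Sigma;\mu_x)$.
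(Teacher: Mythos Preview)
Your proposal is correct and follows essentially the same approach as the paper: reduce to the univariate case, show that the conditional-expectation part of the operator fixes $B_1$ and annihilates $B_2$, then tensorize. The only difference is packaging --- you invoke the abstract fact that conditional expectation is the orthogonal projection onto the $\sigma$-measurable functions, whereas the paper computes $\langle u, 1_{\sigma(\cdot)=s}\rangle_{\mu_x}=0$ for $u\in B_2$ directly; these are the same argument. One minor remark: your aside that saturation is needed to identify the $\sigma$-measurable functions with ${\sf Span}(\{\chi_\sigma\})$ is not quite accurate --- the span equals the $\sigma$-measurable functions regardless, since the characters of $H$ restricted to any subset of $H$ still span all functions on that subset; saturation is what makes the $\chi_\sigma$ linearly \emph{independent}. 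This does not affect the correctness of your argument.
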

\begin{proof}
  It suffices to show that in the $1$-dimensional case, for $u\in B_1\cup B_2$, if $u\in B_1$ then
  $\mathrm{T}_{\text{non-embed}, 1-\xi,\mu} u = u$, and if $u\in B_2$ then $\mathrm{T}_{\text{non-embed}, 1-\xi,\mu} u =(1-\xi) u$.

  For $u\in B_1$ this is clear, since for every $x\in \Sigma$ and every $x'\in {\sf supp}(\mathrm{T}_{\text{non-embed}, 1-\xi,\mu} x)$
  it holds that $\sigma(x) = \sigma(x')$, and so $u(x) = u(x')$.

  Fix $u\in B_2$ and fix $x\in\Sigma$. We have that
  \[
  \mathrm{T}_{\text{non-embed}, 1-\xi,\mu} u (x)
  =(1-\xi)u(x) + \xi\cExpect{x'\sim \mu_x}{\sigma(x) = \sigma(x')}{u(x')}.
  \]
  Let $s = \sigma(x)$, and let $p_s = \Prob{x'}{\sigma(x') = s}$. Then the expectation on the right hand side is equal to
  \[
  \cExpect{x'\sim \mu_x}{\sigma(x')=s}{u(x')}
  =p_s^{-1} \Expect{x'\sim \mu_x}{u(x')1_{\sigma(x') = s}}
  =p_s^{-1} \inner{u}{1_{\sigma(\cdot) = s}}_{\mu_x}.
  \]
  Note that for all $s\in H$, the function $1_{\sigma(\cdot) = s}$ is in the span of $B_1$, and as $u\in B_2$ it is orthogonal to it,
  and so the last expression is $0$. We conclude that $\mathrm{T}_{\text{non-embed}, 1-\xi,\mu} u (x)  = (1-\xi)u(x)$.
\end{proof}

\subsubsection{Changing Noise Rates in Non-embedding Stability}
A basic property of the non-embedding noise stability is that it decreases as a result for increasing the noise rate:
\begin{claim}\label{claim:increase_noise_decrease_stab}
  Suppose that $0\leq \rho_1\leq \rho_2\leq 1$. Then for every $f\colon (\Sigma^n; \mu_x^{\otimes n})\to\mathbb{C}$
  we have that
  \[
  {\sf NEStab}_{\rho_1}(f;\mu_x^{\otimes n})
  \leq
  {\sf NEStab}_{\rho_2}(f;\mu_x^{\otimes n}).
  \]
\end{claim}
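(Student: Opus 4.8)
The plan is to use the eigenbasis diagonalization established in Fact~\ref{fact:soft_nonbembed_op}, which tells us exactly how the operator $\mathrm{T}_{\text{non-embed}, 1-\xi,\mu}^{\otimes n}$ acts on monomials: a monomial $u$ of non-embedding degree $d$ is scaled by $(1-\xi)^d$. Since $\{\prod_i u_i(x_i)\}$ over $\vec{u}\in(B_1\cup B_2)^n$ forms an orthonormal basis of $L_2(\Sigma^n;\mu_x^{\otimes n})$, I would expand $f = \sum_{\vec{u}}\widehat{f}(\vec{u})\prod_i u_i$ and compute the non-embedding stability directly.

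First I would write, using Fact~\ref{fact:soft_nonbembed_op} and orthonormality,
\[
{\sf NEStab}_{1-\xi}(f;\mu_x^{\otimes n}) = \inner{f}{\mathrm{T}_{\text{non-embed}, 1-\xi,\mu}^{\otimes n} f} = \sum_{\vec{u}\in(B_1\cup B_2)^n}(1-\xi)^{{\sf nedeg}(\vec{u})}\card{\widehat{f}(\vec{u})}^2.
\]
Grouping by the value $d = {\sf nedeg}(\vec{u})$, this equals $\sum_{d=0}^{n}(1-\xi)^d W_d$ where $W_d = \sum_{{\sf nedeg}(\vec{u})=d}\card{\widehat{f}(\vec{u})}^2 \geq 0$. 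Now observe the slight notational mismatch: the claim is phrased with subscripts $\rho_1\leq \rho_2$ rather than noise parameters $\xi$. Matching conventions, ${\sf NEStab}_{\rho}$ corresponds to $\xi = 1-\rho$, so writing $\rho = 1-\xi$ we have ${\sf NEStab}_{\rho}(f) = \sum_{d=0}^n \rho^d W_d$. Since each $W_d \geq 0$ and $0\leq \rho_1 \leq \rho_2 \leq 1$ gives $\rho_1^d \leq \rho_2^d$ for every $d\geq 0$, the inequality ${\sf NEStab}_{\rho_1}(f) \leq {\sf NEStab}_{\rho_2}(f)$ follows termwise.

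There is essentially no hard part here — the lemma is a one-line Fourier-analytic computation, exactly analogous to the monotonicity of ordinary noise stability in the standard theory of Boolean functions. The only thing requiring a moment of care is making the indexing conventions between ``$1-\xi$'' notation in Fact~\ref{fact:soft_nonbembed_op} and the ``$\rho$'' subscript in Claim~\ref{claim:increase_noise_decrease_stab} consistent; once that is pinned down, the proof is just the observation that $\rho\mapsto\rho^d$ is non-decreasing on $[0,1]$ for each fixed non-negative integer $d$, applied to a non-negative combination of such monomials. I would also note that the identity $\inner{f}{\mathrm{T}^{\otimes n}f} = \sum_d \rho^d W_d$ implicitly uses that $\mathrm{T}_{\text{non-embed}}$ is self-adjoint (it is an averaging operator of a reversible chain with stationary measure $\mu_x$), so the eigenvalues are real and the diagonal expansion is legitimate; this is already implicit in the earlier discussion, so no extra work is needed.
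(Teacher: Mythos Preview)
Your proposal is correct and takes essentially the same approach as the paper: expand $f$ in the orthonormal basis $(B_1\cup B_2)^n$, apply Fact~\ref{fact:soft_nonbembed_op} to write ${\sf NEStab}_{\rho}(f)=\sum_{\vec{u}}\rho^{{\sf nedeg}(\vec{u})}\card{\widehat f(\vec{u})}^2$, and conclude by termwise monotonicity of $\rho\mapsto\rho^d$ on $[0,1]$ since all coefficients $\card{\widehat f(\vec{u})}^2$ are non-negative.
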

\begin{proof}
  Writing $f(x) = \sum\limits_{u\in (B_1\cup B_2)^n}\widehat{f}(u)\prod\limits_{i=1}^{n}u_i(x_i)$, we have by Fact~\ref{fact:soft_nonbembed_op} that
  \begin{align*}
    {\sf NEStab}_{\rho_1}(f;\mu_x^{\otimes n})
    =\inner{f}{\mathrm{T}_{\text{non-embed}, \rho_1} f}
    &=\inner{\sum\limits_{u\in (B_1\cup B_2)^n}\widehat{f}(u)u}{\sum\limits_{u\in (B_1\cup B_2)^n}\widehat{f}(u)\rho_1^{\text{non-embed-deg}(u)}u}\\
    &=\sum\limits_{u\in (B_1\cup B_2)^n}\rho_1^{\text{non-embed-deg}(u)}\card{\widehat{f}(u)}^2,
  \end{align*}
  and the result follows as all $\card{\widehat{f}(u)}^2$ are non-negative.
\end{proof}

\subsubsection{Random Restrictions and Non-embedding Stability}
The following claim is an instantiation of Lemma~\ref{lem:op_comparison_lemma} (and in fact our primary application for that lemma), asserting that
if we have a function $f$ that has small noise non-embedding stability, then in expectation after random restrictions it still has small noise non-embedding
stability.
\begin{claim}\label{claim:rr_nestab}
  For all $m\in\mathbb{N}$, $\alpha>$ there is $c>0$ such that the following holds.
  Let $\nu$ be a distribution over $\Sigma\times\Gamma\times \Phi$ and let $\mathcal{D}$, $\mathcal{D}'$ and $\mathcal{D}''$
  be distributions over $\Sigma$ such that:
  \begin{enumerate}
    \item The probability of each atom in $\mathcal{D},\mathcal{D}',\mathcal{D}''$ is at least $\alpha$.
    \item $\mathcal{D} = \beta\mathcal{D}' + (1-\beta)\mathcal{D}''$.
  \end{enumerate}
  Then, for all $f\colon(\Sigma^n,\mathcal{D}^{\otimes n})\to\mathbb{C}$ we have
  \[
  \Expect{\substack{J\subseteq_{\beta}[n]\\ x''\sim \mathcal{D}''{\overline{J}}}}{{\sf NEStab}_{1-\delta, \nu}(f_{\overline{J}\rightarrow x''}; \mathcal{D}'^{J})}
  \leq {\sf NEStab}_{1-c\beta\delta,\nu}(f; \mathcal{D}).
  \]
\end{claim}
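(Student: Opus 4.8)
The plan is to follow the strategy of Lemma~\ref{lem:op_comparison_lemma}, the only twist being that the non-embedding operator we must handle re-randomizes within a fiber of the master embedding (including the point itself), rather than among the neighbours in a fixed graph; this is harmless and in fact makes the spectral bookkeeping slightly cleaner. Write $\sigma\colon\Sigma\to H$ for the $\Sigma$-component of the master embedding of $\nu$, let $V={\sf Embed}_\sigma(\nu)\subseteq L_2(\Sigma)$ be the space of functions depending only on $\sigma$, and record the only structural fact we use about the operator: for any distribution $\mathcal{E}$ on $\Sigma$, $\mathrm{T}_{\text{non-embed},1-\xi,\mathcal{E},\nu}$, viewed on $L_2(\Sigma,\mathcal{E})$, acts as the identity on $V$ and as multiplication by $(1-\xi)$ on $V^{\perp_{\mathcal{E}}}$ (this is Fact~\ref{fact:soft_nonbembed_op} in the relevant generality).

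First I would unpack the left hand side. Plugging in Definition~\ref{def:nestab}, unfolding the restriction-plus-non-embedding-stability process, and using independence across coordinates, the left hand side equals $\Expect{(x,x')\sim\mathcal{D}_0^{\otimes n}}{f(x)\overline{f(x')}}$, where $\mathcal{D}_0$ is the distribution on $\Sigma\times\Sigma$ given by: with probability $1-\beta$ put $x_1=x_1'\sim\mathcal{D}''$, and with probability $\beta$ sample $x_1\sim\mathcal{D}'$ and then $x_1'\sim\mathrm{T}_{\text{non-embed},1-\delta,\mathcal{D}',\nu}x_1$. Since $\mathcal{D}'$ is stationary for $\mathrm{T}_{\text{non-embed},1-\delta,\mathcal{D}'}$ and the latter is reversible, $\mathcal{D}_0$ is symmetric with both marginals equal to $\beta\mathcal{D}'+(1-\beta)\mathcal{D}''=\mathcal{D}$; hence $\mathcal{D}_0$ defines a reversible Markov operator $\mathrm{T}$ on $L_2(\Sigma,\mathcal{D})$ with stationary distribution $\mathcal{D}$, and the left hand side is $\inner{f}{\mathrm{T}^{\otimes n}f}_{\mathcal{D}^{\otimes n}}$. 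Moreover $\mathrm{T}$ is positive semidefinite, being a convex combination of the identity and the positive semidefinite operator $\mathrm{T}_{\text{non-embed},1-\delta,\mathcal{D}'}$, and it preserves $V$ (because $(1-\beta)\mathcal{D}''+\beta\mathcal{D}'=\mathcal{D}$ and the non-embedding operator fixes $V$), hence also preserves $V^{\perp_{\mathcal{D}}}$.

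The crucial quantitative step is to show that the eigenvalues of $\mathrm{T}$ on $V^{\perp_{\mathcal{D}}}$ are at most $1-c\beta\delta$ for some $c=c(\alpha)>0$. For a normalized $g\in V^{\perp_{\mathcal{D}}}$ one computes, using the structure of the non-embedding operator, that $\inner{g}{\mathrm{T}g}_{\mathcal{D}}=(1-\beta)\norm{g}_{\mathcal{D}''}^2+\beta\big(\norm{g}_{\mathcal{D}'}^2-\delta\norm{g_\perp}_{\mathcal{D}'}^2\big)=1-\beta\delta\norm{g_\perp}_{\mathcal{D}'}^2$, where $g_\perp$ is the $\mathcal{D}'$-orthogonal projection of $g$ onto $V^{\perp_{\mathcal{D}'}}$; so it remains to bound $\norm{g_\perp}_{\mathcal{D}'}^2$ from below by a constant. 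Writing things fiber by fiber, $\norm{g_\perp}_{\mathcal{D}'}^2=\sum_{s}\mathcal{D}'(\sigma^{-1}(s))\,\mathrm{Var}_{x\sim\mathcal{D}'|_{\sigma^{-1}(s)}}(g)$ while $1=\norm{g}_{\mathcal{D}}^2=\sum_{s}\mathcal{D}(\sigma^{-1}(s))\,\mathrm{Var}_{x\sim\mathcal{D}|_{\sigma^{-1}(s)}}(g)$ (the latter since $g\perp_{\mathcal{D}}V$ forces the $\mathcal{D}$-mean of $g$ on each fiber to vanish). Since all atoms of $\mathcal{D}',\mathcal{D},\mathcal{D}''$ are at least $\alpha$, within each fiber the two conditional distributions have atoms in $[\alpha,1]$, so their variances of any function agree up to a factor $\alpha^2$, and likewise $\mathcal{D}'(\sigma^{-1}(s))\ge\alpha\,\mathcal{D}(\sigma^{-1}(s))$; comparing term by term gives $\norm{g_\perp}_{\mathcal{D}'}^2\ge\alpha^3$, so one may take $c=\alpha^3$.

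Finally I would tensorize exactly as in Lemma~\ref{lem:op_comparison_lemma}: decompose $L_2(\Sigma^n,\mathcal{D}^{\otimes n})=\bigoplus_{S\subseteq[n]}V^{\otimes S}\otimes(V^{\perp_{\mathcal{D}}})^{\otimes\overline{S}}$, which is preserved by both $\mathrm{T}^{\otimes n}$ and $\mathrm{T}_{\text{non-embed},1-c\beta\delta,\mathcal{D},\nu}^{\otimes n}$. Writing $f=\sum_S f_S$ accordingly, the second operator acts on $f_S$ as the scalar $(1-c\beta\delta)^{\card{\overline{S}}}$, while $\mathrm{T}^{\otimes n}$ acts on $f_S$ with all eigenvalues in $[0,(1-c\beta\delta)^{\card{\overline{S}}}]$ by the previous paragraph (eigenvalue $1$ on the $S$-coordinates, at most $1-c\beta\delta$ on the $\overline{S}$-coordinates); summing over $S$ yields $\inner{f}{\mathrm{T}^{\otimes n}f}_{\mathcal{D}^{\otimes n}}\le\inner{f}{\mathrm{T}_{\text{non-embed},1-c\beta\delta,\mathcal{D},\nu}^{\otimes n}f}_{\mathcal{D}^{\otimes n}}={\sf NEStab}_{1-c\beta\delta,\nu}(f;\mathcal{D})$, which is the claim. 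I do not anticipate a genuine obstacle; the only step requiring care is the fiber-wise variance comparison above, and in particular making sure one uses the decomposition $\mathcal{D}=\beta\mathcal{D}'+(1-\beta)\mathcal{D}''$ (not merely that $\mathcal{D}'$ has large atoms) so as to extract the factor $\beta$ in the resulting noise rate.
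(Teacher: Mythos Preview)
Your proof is correct and follows essentially the same strategy as the paper, which simply invokes Lemma~\ref{lem:op_comparison_lemma} with the graph $G$ on $\Sigma$ in which $x\sim x'$ iff $\sigma(x)=\sigma(x')$ (so that the Markov chain $\mathrm{T}_{1-\xi,\mathcal{D}}$ there coincides with $\mathrm{T}_{\text{non-embed},1-\xi,\mathcal{D},\nu}$). You have effectively unrolled that lemma's proof in this special case, exploiting the exact two-eigenvalue structure of the non-embedding operator (eigenvalue $1$ on $V$ and $1-\xi$ on $V^{\perp}$) to bound the spectrum of $\mathrm{T}$ on $V^{\perp_{\mathcal{D}}}$ directly via the fiber-wise variance comparison, rather than going through Lemma~\ref{lem:mossel_MC}; this is marginally cleaner and yields an explicit constant $c=\alpha^3$, but the overall architecture---rewrite the LHS as $\langle f,\mathrm{T}^{\otimes n}f\rangle$, bound eigenvalues on $V^{\perp}$, tensorize over the decomposition $\bigoplus_S V^{\otimes S}\otimes(V^{\perp})^{\otimes\overline{S}}$---is identical.
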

\begin{proof}
  This is an immediate consequence of Lemma~\ref{lem:op_comparison_lemma}. In the notation therein, the vertex set of the graph $G$ is
  $\Sigma$, and $x,x'$ are adjacent if $\sigma(x) = \sigma(x')$ where $\sigma$ is the master embedding of $x$ for the distribution $\nu$.
  The left hand side and the right hand side in the above claim are precisely the left hand side and the right hand in Lemma~\ref{lem:op_comparison_lemma}.
\end{proof}

\subsection{Non-Embedding Influences of a Function}\label{sec:nonembed_inf}
We will need the notion of non-embedding influences of a function defined as follows.
\begin{definition}\label{def:non_embed_inf}
  Let $\mu$ be a distribution over $\Sigma\times \Gamma\times \Phi$, and let $\sigma\colon \Sigma\to H$, $\gamma\colon \Gamma\to H$
  and $\phi\colon \Phi\to H$ be a master embedding. For a function $f\colon \Gamma^n\to\mathbb{C}$ and a coordinate $j\in [n]$, we
  define the non-embedding influence of $f$ to be
  \[
  I_{j, \text{non-embed}}[f] = \cExpect{x'\sim \mu_x^{\otimes n}, a, b\sim \mu_x}{\sigma(a) = \sigma(b)}{\card{f(x_{-j} = x', x_j = a) - f(x_{-j} = x', x_j = b)}^2}.
  \]
\end{definition}

Non-embedding influences are defined analogously for functions over $y$ and $z$.
\begin{definition}
  In the setting of Definition~\ref{def:non_embed_inf}, the total non-embedding influence of $f$ is
  \[
  I_{\text{non-embed}}[f] = \sum\limits_{j=1}^{n} I_{j, \text{non-embed}}[f].
  \]
\end{definition}

We have the following easy fact.
\begin{fact}\label{fact:nonembed_inf_formula}
  In the setting of Definition~\ref{def:non_embed_inf}, we have
  \begin{enumerate}
    \item $I_{j, \text{non-embed}}[f] = 2\sum\limits_{\chi: \chi_{j}\not\in B_{1}}\card{\widehat{f}(\chi)}^2$.
    \item $I_{\text{non-embed}}[f] = \sum\limits_{\chi}{\sf nedeg}(\chi)\card{\widehat{f}(\chi)}^2$.
  \end{enumerate}
\end{fact}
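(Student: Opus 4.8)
The plan is to reduce both parts to a one-coordinate computation together with the orthonormality of the monomial basis $\{v_\chi\}_{\chi\in(B_1\cup B_2)^n}$, $v_\chi(x)=\prod_{i=1}^n\chi_i(x_i)$, from Section~\ref{sec:partial_basis}, for which one has $f=\sum_\chi\widehat f(\chi)v_\chi$ with $\widehat f(\chi)=\inner{f}{v_\chi}$ and Parseval's identity $\norm{f}_2^2=\sum_\chi\card{\widehat f(\chi)}^2$. Fix $j\in[n]$ and let $\mathrm{T}^{(j)}$ be the averaging operator on $L_2(\Sigma^n,\mu_x^{\otimes n})$ that leaves all coordinates except $j$ unchanged and replaces $x_j$ by a fresh sample from $\mu_x$ conditioned on preserving $\sigma(x_j)$; this is exactly $\mathrm{T}_{\text{non-embed},1-\xi,\mu}$ with $\xi=1$, acting on the $j$-th coordinate and by the identity on the rest. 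Being a conditional expectation in $L_2(\Sigma^n,\mu_x^{\otimes n})$, it is self-adjoint and idempotent, and by Fact~\ref{fact:soft_nonbembed_op} (specialized to $\xi=1$) — or directly, since a function from $B_1$ depends only on $\sigma$ of its coordinate and is fixed by resampling within a $\sigma$-fiber, whereas a function from $B_2$ is orthogonal to ${\sf Span}(B_1)$ in $L_2(\Sigma,\mu_x)$ and so has zero average over each $\sigma$-fiber — it acts on $v_\chi$ as the identity when $\chi_j\in B_1$ and as $0$ when $\chi_j\in B_2$.

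For the first part, write $F_{x'}(a)=f(x_{-j}=x',x_j=a)$ and expand the square in Definition~\ref{def:non_embed_inf}:
\[
I_{j,\text{non-embed}}[f]=\Expect{x',a,b}{\card{F_{x'}(a)}^2}+\Expect{x',a,b}{\card{F_{x'}(b)}^2}-2\,{\sf Re}\,\Expect{x',a,b}{F_{x'}(a)\overline{F_{x'}(b)}}.
\]
In the law of $(x',a,b)$ from the definition, the pair $(x'_{-j},a)$ is distributed according to $\mu_x^{\otimes n}$, and likewise $(x'_{-j},b)$, so the first two expectations each equal $\norm{f}_2^2$; and since $a,b$ are conditionally independent given $\sigma(a)=\sigma(b)$ with each marginal $\mu_x$, taking the $b$-expectation first identifies the third expectation with $\inner{f}{\mathrm{T}^{(j)}f}$. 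Plugging the Fourier expansion of $f$ into $\inner{f}{\mathrm{T}^{(j)}f}$ and using orthonormality of $\{v_\chi\}$ together with the action of $\mathrm{T}^{(j)}$ above gives $\inner{f}{\mathrm{T}^{(j)}f}=\sum_{\chi:\chi_j\in B_1}\card{\widehat f(\chi)}^2$, which is real and nonnegative; combined with Parseval this yields $I_{j,\text{non-embed}}[f]=2\norm{f}_2^2-2\sum_{\chi:\chi_j\in B_1}\card{\widehat f(\chi)}^2=2\sum_{\chi:\chi_j\notin B_1}\card{\widehat f(\chi)}^2$, i.e.\ the first identity.

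The second part is then obtained by summing the first over $j\in[n]$ and exchanging the order of summation: each $\chi$ is counted once for every coordinate $j$ with $\chi_j\notin B_1$, i.e.\ with $\chi_j\in B_2$, and there are exactly ${\sf nedeg}(\chi)$ such coordinates, which gives $I_{\text{non-embed}}[f]=\sum_{j=1}^n I_{j,\text{non-embed}}[f]=\sum_\chi{\sf nedeg}(\chi)\card{\widehat f(\chi)}^2$. There is no genuine obstacle; the only points needing care are checking that the marginals extracted from the joint law of $(x',a,b)$ are precisely $\mu_x^{\otimes n}$ (so Parseval applies) and that $\{v_\chi\}$ diagonalizes $\mathrm{T}^{(j)}$, both of which are immediate from the saturation of $\sigma$ and the definition of $B_1,B_2$ in Section~\ref{sec:partial_basis}.
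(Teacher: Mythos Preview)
Your proof is correct and arrives at the same place as the paper's, but via a slightly cleaner route. The paper expands the difference $f(x',a)-f(x',b)$ in the monomial basis first, drops the $\chi_j\in B_1$ terms (since those satisfy $\chi_j(a)=\chi_j(b)$ on the event $\sigma(a)=\sigma(b)$), and then computes the expected square directly, spending some effort showing the cross terms $\chi\neq\chi'$ vanish. You instead expand the square as $\|\cdot\|^2+\|\cdot\|^2-2{\sf Re}\langle\cdot,\cdot\rangle$, recognize the cross term as $\langle f,\mathrm{T}^{(j)}f\rangle$ for the $\xi=1$ non-embedding noise operator on coordinate $j$, and invoke the already-established diagonalization (Fact~\ref{fact:soft_nonbembed_op}) together with Parseval. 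This buys you the vanishing of cross terms and the value $2$ on the diagonal for free, rather than recomputing them. Both arguments rely on the same underlying facts (that $B_1$-functions are constant on $\sigma$-fibers and $B_2$-functions average to zero on each fiber), so the difference is packaging, not substance.

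One minor note: summing part~1 over $j$ literally gives $2\sum_\chi{\sf nedeg}(\chi)\card{\widehat f(\chi)}^2$, so the statement of part~2 (and your final line) is missing a factor of $2$; the paper's own proof makes the identical derivation, and the analogous Fact~\ref{fact:modest_inf_formula} does carry the factor of $2$, so this is evidently a typo in the statement rather than an error in either argument.
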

\begin{proof}
  For the first bullet, we note that
  \begin{align*}
  f(x_{-j} = x', x_j = a) - f(x_{-j} = x', x_j = b)
  &=\sum\limits_{\chi}\widehat{f}(\chi)(\chi(x',a)-\chi(x',b))\\
  &=\sum\limits_{\chi}\widehat{f}(\chi)(\chi_{j}(b) - \chi_{j}(a))\prod\limits_{i\neq j}\chi_i(x'_i).
  \end{align*}
  For $\chi_j\in B_{1}$, $\chi_{j}(b) - \chi_{j}(a)=0$ whenever $\sigma(a) = \sigma(b)$, hence such terms give no contribution to the non-embedding
  influence. We thus get
  \begin{align*}
    &\cExpect{x', a, b}{\sigma(a) = \sigma(b)}{\card{f(x_{-j} = x', x_j = a) - f(x_{-j} = x', x_j = b)}^2}\\
    &=\sum\limits_{\substack{\chi,\chi'\\\chi_j,\chi_j'\not\in B_{{\sf embed}}}}\widehat{f}(\chi)\overline{\widehat{f}(\chi')}\left(\prod\limits_{i\neq j} 1_{\chi_i = \chi_i'}\right)
    \cExpect{a, b}{\sigma(a) = \sigma(b)}{(\chi_{j}(b) - \chi_{j}(a))\overline{(\chi_{j}'(b) - \chi_{j}'(a))}}
  \end{align*}
  We claim that if $\chi_{j}\neq \chi_{j}'$, then the expectation is $0$. Indeed, expanding we get terms such as
  $\chi_{j}(b)\overline{\chi_{j}'(a)}$, and we have that their expectation can be written as
  \[
  \sum\limits_{h\in H}\Prob{a\sim \mu_x}{\sigma(a) = h}\cExpect{b}{\sigma(b) = h}{\chi_j(b)}\cExpect{a}{\sigma(a) = h}{\overline{\chi_j'(a)}},
  \]
  which is $0$ as $\cExpect{b}{\sigma(b) = h}{\chi_j(b)}$ is proportional to the inner product between $\chi_j$ (a function orthogonal to embedding functions)
  and $1_{\sigma(b) = h}$ (an embedding function). Other terms are $\chi_{j}(b)\overline{\chi_{j}'(b)}$, which give the inner product between $\chi_j$ and $\chi_j'$
  which is $0$.

  We conclude that to give non-zero contribution we must have that $\chi_j = \chi_j'$ and so the non-embedding influence of $j$ is
  equal to
  \[
  \sum\limits_{\chi:\chi_j\not\in B_{{\sf embed}}}\card{\widehat{f}(\chi)}^2
    \cExpect{a, b}{\sigma(a) = \sigma(b)}{\card{\chi_{j}(b) - \chi_{j}(a)}^2}.
  \]
  Expanding the square and repeating the above computation, we get that the expectation is $2$, and the claim is proved.

  The second bullet follows immediately by summing up the first bullet over all $j=1,\ldots,n$.
\end{proof}
\subsection{A Non-Embedding Stability Formulation of Theorem~\ref{thm:main_stab_3_saturated}}
With these notions, we can now state the result asserting that the only functions $f,g,h$ for which expectations as in Theorem~\ref{thm:main_stab_3_saturated}
may be large, are functions for which the non-embedding stability is significant. This result by itself is very much in the spirit of Theorem~\ref{thm:main_stab_3_saturated},
but morally it is strictly weaker. Later, in Section~\ref{sec:the_hastad_argument}, we will show how Theorem~\ref{thm:main_stab_3_saturated} is implied
by Theorem~\ref{thm:nonembed_deg_must_be_small} below. Another difference in the formulation is that below the dependency between the parameters is more explicit,
and we do so as it is necessary for our proof to go through.

\begin{thm}\label{thm:nonembed_deg_must_be_small}
  For all $m\in\mathbb{N}$, $\alpha>0$ there are $M\in\mathbb{N}$, $\delta_0>0$ and $\eta>0$ such that the following holds for all $0< \delta\leq \delta_0$.
  Suppose that $\mu$ is a distribution over $\Sigma\times \Gamma\times \Phi$ satisfying:
  \begin{enumerate}
    \item The probability of each atom is at least $\alpha$.
    \item The size of each one of $\Sigma,\Gamma,\Phi$ is at most $m$.
    \item ${\sf supp}(\mu)$ is pairwise connected and the support of $\mu_{y,z}$ is full.
    \item There are master embeddings $\sigma,\gamma,\phi$ for $\mu$ into an Abelian group $(H,+)$ that are saturated,
    and the distribution of $(\sigma(x),\gamma(y),\phi(z))$ where $(x,y,z)\sim \mu$ has full support on $\{(a,b,c)\in H^3~|~a+b+c = 0\}$.
  \end{enumerate}
  Then, if $f\colon\Sigma^n\to \mathbb{C}$, $g\colon\Gamma^n\to \mathbb{C}$ and $h\colon\Phi^n\to \mathbb{C}$ are $1$-bounded functions such that
  ${\sf NEStab}_{1-\delta}(g;\mu_y^{\otimes n})\leq \delta$,
  then
  \[
  \card{\Expect{(x,y,z)\sim \mu^{\otimes n}}{f(x)g(y)h(z)}}\leq M\delta^{\eta}.
  \]
\end{thm}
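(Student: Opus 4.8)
The goal is to show that if $g$ has small non-embedding stability then the $3$-wise correlation is polynomially small in $\delta$. The strategy is the inductive ``peeling'' argument sketched in Step 4 of the proof overview: we set up the partial bases $B_1\cup B_2$ (embedding / non-embedding functions) for each of the three alphabets, and bound the $3$-wise correlation by induction on the number of coordinates $n$, tracking simultaneously the non-embedding degrees of the three functions. The first reductions I would carry out are standard clean-up steps: (i) by the non-embedding noise operator $\mathrm T_{\text{non-embed},1-\delta}$ being self-adjoint and by Fact~\ref{fact:soft_nonbembed_op}, replace $f,g,h$ by their ``softly truncated'' versions so that almost all of their $L_2$-mass sits on monomials of non-embedding degree at most $D = O(\log(1/\delta)/\delta)$ while keeping $1$-boundedness (this costs only an additive $O(\sqrt{\delta})$ in the correlation and uses that $g$'s stability is $\le\delta$ to kill its positive-non-embedding-degree part — so effectively $g$ becomes an embedding function up to error); (ii) by a random restriction of the kind in Section~\ref{sec:random_restrictions} together with Claim~\ref{claim:rr_nestab} and Lemma~\ref{lem:op_comparison_lemma}, reduce to the case where the functions are embedding- and non-embedding-\emph{homogeneous} (all monomials of a given function share the same embedding degree and the same non-embedding degree) — this is the reduction described as residing in Section~\ref{sec:reudce_to_homogenous}. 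One may also arrange, again via path-trick-flavoured tricks / random restrictions, that the marginal of $\mu$ on $(y,z)$ and on $(\sigma(x),\gamma(y),\phi(z))$ are as close to uniform as needed; the hypotheses already give full support.

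\textbf{The inductive core.} With homogeneity in hand, single out the last coordinate and write, as in~\eqref{eq:intro_inductive},
\[
F = \sum_{t} \psi_t F_t(x_{1..n-1})F_t'(x_n),\quad
G = \sum_{r} \kappa_r G_r(y_{1..n-1})G_r'(y_n),\quad
H = \sum_{s} \rho_s H_s(z_{1..n-1})H_s'(z_n),
\]
where the $F_t',G_r',H_s'$ are single-coordinate basis functions from $B_1\cup B_2$ and the prefactors satisfy $\sum|\psi_t|^2=\sum|\kappa_r|^2=\sum|\rho_s|^2=1$. Expanding $\E_{\mu^{\otimes n}}[FGH]$ gives a sum over triples $(t,r,s)$ of $\E_{\mu^{\otimes n-1}}[F_tG_rH_s]\cdot\E_\mu[F_t'G_r'H_s']$. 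The key single-coordinate input is the ``base case'': if any one of $F_t',G_r',H_s'$ lies in $B_2$ (has positive non-embedding degree), then $|\E_\mu[F_t'G_r'H_s']|\le 1-\Omega(1)$ — and, crucially, a \emph{sharper, additive} version (the ``additive base case'') that lets us treat the sum of several such terms at once rather than term-by-term. These base cases are exactly the statements proved in Sections~\ref{sec:base_case} using the path trick to defeat the Horn-SAT obstruction; for this proof I would invoke them as black boxes (they hold for $\mu$ by hypothesis (4) after the preliminary path-trick massaging of Section~\ref{sec:master_embed}). There are two regimes: when $n$ is much larger than the common non-embedding degree (Section~\ref{sec:reduce_to_near_linear}), most of the mass of $x_n$ is on $B_1$, so $|\psi_2|$ (the $B_2$-part) is small, and by regrouping the $B_1$-terms via the identity $G_1'=\overline{F_1'}\,\overline{H_1'}$ one re-casts the residual terms into the additive base case and concludes $|\E_{\mu^{\otimes n}}[FGH]|$ is at most either $|\E_{\mu^{\otimes n-1}}[F_1G_1H_1]|$ (same non-embedding degrees, $n$ decreased) or $(1-\Omega(1))|\E_{\mu^{\otimes n-1}}[\tilde F\tilde G\tilde H]|$ (all three non-embedding degrees decreased by $1$). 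When $n$ is comparable to the non-embedding degree — the ``near-linear'' case (Section~\ref{sec:prove_near_lin}) — one needs the \emph{ideal/relaxed base case} $|\E_\mu[\tilde F\tilde G\tilde H]|\le 1-\Omega(1)$ for \emph{all} unit functions with $\tilde F\in B_2$, again supplied by Section~\ref{sec:base_case}, to bound the otherwise-non-negligible $\psi_2\kappa_2\rho_2$ term. In both cases, iterating the recursion $n\mapsto n-1$ yields a factor $(1-\Omega(1))$ each time a non-embedding degree drops, and since starting from the truncation step $g$ is (essentially) an embedding function, the correlation picks up at least $\Omega(\mathrm{nedeg})$ such drops relative to $f$ and $h$ — actually, because $\mathrm{NEStab}_{1-\delta}(g)\le\delta$ forces $g$'s contribution to come from monomials of non-embedding degree $\gtrsim 1/\delta$ on $\ge\delta$ of its mass (Fact~\ref{fact:nonembed_inf_formula} / Fact~\ref{fact:soft_nonbembed_op}), one gains $(1-\Omega(1))^{\Omega(1/\delta)}$ or, carrying the bookkeeping through the random restrictions, a clean polynomial bound $M\delta^\eta$.

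\textbf{Assembling the bound.} Concretely: after the soft truncation, $\|g - g^{\le D}_{\text{ne}}\|_2 \le O(\sqrt\delta)$ where $g^{\le D}_{\text{ne}}$ keeps only non-embedding degree $\le D$; but by $\mathrm{NEStab}_{1-\delta}(g)\le\delta$ and Claim~\ref{claim:increase_noise_decrease_stab}, the non-embedding-degree-$\ge 1$ part of $g$ already has small mass unless the degree is $\gtrsim 1/\delta$, so either $g$ is $O(\sqrt\delta)$-close to an embedding function (and then $\E[FGH]$ is handled by the $\mu$-has-no-relevant-structure case, i.e.\ the result of~\cite{BKMcsp1,BKMcsp2} applied after peeling off $g$'s embedding part, giving a polynomially small bound), or $g$'s surviving mass sits at non-embedding degree $\ge c/\delta$, in which case every monomial triple contributing to the expectation forces $\ge c/\delta$ non-embedding-degree-drops in the recursion, yielding $(1-\Omega(1))^{c/\delta}\le \delta$ for $\delta$ small. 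Tracking the constants through the homogeneity reduction (which blows the number of ``types'' up by $O_m(1)$) and the random restrictions (which cost constant factors in the exponent) gives the stated $M = M(m,\alpha)$ and $\eta=\eta(m,\alpha)>0$.

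\textbf{Main obstacle.} The genuine difficulty is not the bookkeeping of the recursion but establishing the base cases in a form strong enough to be iterated — specifically the \emph{additive} base case $|\E_\mu[F'(G'+H')]|\le(1-\Omega(1))\|F'\|_2\|G'+H'\|_2$ for non-embedding $F'$, and its near-linear-regime counterpart — while simultaneously navigating the Horn-SAT obstruction (non-embedding functions that vanish on part of $\Sigma$, so one cannot take logarithms to produce a genuine Abelian embedding and derive a contradiction). Handling this requires the preliminary path-trick surgery of Section~\ref{sec:master_embed} to produce a ``modest'' sub-alphabet $\Sigma_{\text{modest}}$ on which no Horn-SAT embedding vanishes and whose symbols share a master-embedding value, and then a quantitatively controlled compactness/stability argument relating the variance of $F'$ on $\Sigma_{\text{modest}}$ to the defect $1-|\E_\mu[F'G'H']|$. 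That is where the bulk of the technical weight of the whole proof lies; the inductive argument above is, by comparison, routine once those inputs are granted.
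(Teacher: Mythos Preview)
Your proposal follows the paper's high-level outline (partial bases, soft truncation, induct on $n$ peeling one coordinate at a time, two regimes depending on whether $n$ is much larger than or comparable to the non-embedding degree), but there are genuine gaps in how the pieces fit together.

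First, step (i) misreads the hypothesis. The condition ${\sf NEStab}_{1-\delta}(g)\le\delta$ says $\sum_u (1-\delta)^{{\sf nedeg}(u)}|\widehat g(u)|^2\le\delta$, which forces $g$'s mass to sit at \emph{high} non-embedding degree ($\gtrsim 1/\delta$), not low. Truncating to degree $D=O(\log(1/\delta)/\delta)$ does not make $g$ ``effectively an embedding function''; indeed the mass on the embedding part is already $\le\delta$ directly. Your later dichotomy also wobbles: the first branch is really ``$\|g\|_2$ is small'', not ``$g$ is close to an embedding function''.

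Second, and more seriously, you are missing the notion of \emph{maximality} (Definition~\ref{def:maximal}) and its role. The induction is on $f$'s (effective) non-embedding degree, but the hypothesis only controls $g$'s. The paper's proof first reduces (via Lemma~\ref{lem:pushing_counter_to_extreme}) to a distribution that is fully merged and maximal, and then uses maximality repeatedly --- in the spirit of Lemmas~\ref{lem:demonstrate_maximality},~\ref{lem:demonstrate_maximality2} --- to argue that the low-(effective-)non-embedding-degree part of $f$ contributes negligibly, thereby transferring the ``high degree'' condition from $g$ to $f$. Without this transfer step there is no reason the recursion gains any $(1-\Omega(1))$ factors: the fact that $g$ sits at nedeg $\gtrsim 1/\delta$ does not by itself force the single-coordinate decomposition of $f$ to drop a non-embedding degree.

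Third, the base cases cannot be invoked ``for $\mu$ by hypothesis (4)''. Hypothesis (4) is saturation of the master embedding; it does not give the relaxed base case. The paper constructs, via further path tricks and merges (Section~\ref{sec:relaxed_implies}), a different distribution $\nu$ together with a modest set $\Sigma_{\sf modest}$, proves the relaxed base case for $\nu$ (Lemma~\ref{lem:nu_satisfies}), and then runs the entire induction over $\nu$. The relaxed base case only gains $(1-\Omega(1))$ when the single-coordinate $f$-part has variance on $\Sigma_{\sf modest}$, which is why the induction must track the \emph{effective} non-embedding degree (the parameter $d'$ in $\beta_{n,d,d'}$ and $\delta_{n,d'}$), not just the non-embedding degree. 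Your recursion as written tracks only ${\sf nedeg}$, so in the near-linear regime you cannot appeal to the relaxed base case to handle the $\psi_2\kappa_2\rho_2$ term.

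In short: the inductive skeleton you describe is the right one, but the reductions that make it run --- passing to a maximal, fully merged distribution; constructing $\nu$ and $\Sigma_{\sf modest}$; introducing effective degree; and using maximality to force $f$ to have high effective degree --- are not optional bookkeeping, they are the substance of the proof, and without them the induction does not close.
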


\section{Maximality, Merging Symbols and Some Motivating Examples}\label{sec:max_merg_mot}
We do not know how to prove Theorem~\ref{thm:nonembed_deg_must_be_small} directly, and our argument instead proceeds by further reducing this statement
to a similar looking statement in which the distribution $\mu$ has additional useful properties.
In this section we present two important ideas/ properties that are crucial in this reduction, which are called ``merging
symbols'' and ``maximality''. We then give a few examples of arguments that could be carried out using these notions, often
making additional assumptions on $\mu$ (which we are not going to have in our formal argument in
Sections~\ref{sec:base_case},~\ref{sec:reudce_to_homogenous}). We do this so as to demonstrate typical scenarios in which these ideas are useful while
avoiding gory technicalities. As such, the language in this section will be informal at times, and we often appeal to intuition instead of making precise arguments.

Once we have explained these concepts, we will turn our attention into discussing the so-called ``base case'' of Theorem~\ref{thm:nonembed_deg_must_be_small}.
By that, we mean a specialized statement in the setting of Theorem~\ref{thm:nonembed_deg_must_be_small} for uni-variate functions, in which one manages to
prove that the expectation is consideration is significantly smaller than $1$ (for functions whose $2$-norm is at most $1$). We will discuss the ``ideal base case'',
which is a hypothetical scenario that we are not actually able to ensure; nevertheless, if such scenario were to hold, our argument would greatly simplify,
and intuitively Theorem~\ref{thm:nonembed_deg_must_be_small} would follows from the base case form some tensorization argument. Once again, our focus in
this section will be to explain how ``merges'' and ``maximality'' facilitate such arguments.

Finally, after exploring the ``ideal base case'' scenario we will explain the issue that may arise,
which we refer to as the ``Horn-SAT'' obstruction. We will explain the high level idea of how this issue is dealt with via
what we call the ``relaxed base case'' and the intuition to why this relaxed base case should suffice for the purpose of proving
Theorem~\ref{thm:nonembed_deg_must_be_small}. Once again, our focus here will be in explaining how the concept of ``merges'' and ``maximality''
fit together with the relaxed base case.


\subsection{Merging Symbols}
The first operation we discuss is the merge operation. Suppose that we have a distribution $\mu$ as in Theorem~\ref{thm:nonembed_deg_must_be_small}
in which there are distinct symbols $x$ and $x'$ such that there are common $y\in \Gamma$ and $z\in \Phi$ for which $(x,y,z)$ and $(x',y,z)$ are both
in the support of $\mu$. Intuitively, this means that in coordinates wherein the $g$ function gets $y$ and the $h$ functions gets $z$, some
non-trivial averaging of the $f$ function still occurs (as both $x$ and $x'$ are still possible). Naturally, averaging a function decreases its
$2$-norm, and we expect there to be a constant fraction of the coordinates in which even after fixing $y$ and $z$ there is still uncertainty
whether $x$ or $x'$ occur in the corresponding coordinate in the function $f$. It follows that if the value of the function $f$ ``heavily distinguishes''
between the symbols $x$ and $x'$ (in the sense that its value changes drastically if we change some coordinates in which $x$ occurs to be $x'$)
then the expectation in Theorem~\ref{thm:nonembed_deg_must_be_small} is small based solely on the fact that some non-trivial averaging occurs over $f$.

Following this line of reasoning leads one to speculate that one may assume that the function $f$ does not distinguish between the two
symbols $x$ and $x'$ (as otherwise the statement is trivial), in which case one may as well treat them as the same symbol.
The goal of the merge operation is to precisely capture this idea, and we formally present it below.
\begin{definition}
  Let $\Sigma$, $\Gamma$, $\Phi$ be finite alphabets and let $P\subseteq \Sigma\times \Gamma\times \Phi$. We say that $y,z$ imply $x$ in $P$
  if for all $y\in\Gamma$, $z\in\Phi$ there is at most a single $x\in\Sigma$ such that $(x,y,z)\in P$.
  We say that $y,z$ imply $x$ in a distribution $\mu$ over $\Sigma\times\Gamma\times \Phi$ if $y,z$ imply $x$ in ${\sf supp}(\mu)$.
\end{definition}
  If the value of any two coordinates implies the third, we say a distribution $\mu$ is fully merged:
\begin{definition}\label{def:fully_merged}
  Let $\Sigma$, $\Gamma$, $\Phi$ be finite alphabets and let $\mu$ be a distribution over $\Sigma\times \Gamma\times \Phi$. We say
  that $\mu$ is fully merged if (1) $y,z$ imply $x$, (2) $x,z$ imply $y$, and (3) $x,y$ imply $z$.
\end{definition}

Fix finite alphabets $\Sigma$, $\Gamma$, $\Phi$ and a distribution $\mu$ over $\Sigma\times\Gamma\times\Phi$.
To define the merge operation more precisely consider the graph
$G_{x,\mu,{\sf merge}}$ over $\Sigma$, wherein $x$ and $x'$ are adjacent if there are common $y\in\Gamma$ and $z\in \Phi$ such that
$(x,y,z)$ and $(x',y,z)$ are both in ${\sf supp}(\mu)$. Note that $G_{x,\mu,{\sf merge}}$ has $\card{\Sigma}$ connected components
if and only if $y,z$ imply $x$ (in which case the merge operation will do nothing), and by the above logic one should think of each
connected component of $G_{x,\mu,{\sf merge}}$ as a single symbol. More formally, given $\mu$ we may define the distribution $\mu'$
as follows:
\begin{definition}\label{def:merge}
  Let $\Sigma$, $\Gamma$, $\Phi$ be finite alphabets, let $\mu$ be a distribution over $\Sigma\times\Gamma\times\Phi$ in which $y,z$ does not
  imply $x$, and let $G_{x,\mu,{\sf merge}}$ be the graph above.
  We define the $x$-merged distribution $\mu'$ over $\Sigma'\times \Gamma\times \Phi$ where $\Sigma' \subsetneq \Sigma$ as: let the connected
  components of $G_{x,\mu,{\sf merge}}$ be $C_1,\ldots,C_{\ell}$ and choose a distinguished element $x_i^{\star}$ from each connected component.
  To sample according to $\mu'$, we sample $(x,y,z)\sim \mu$, take $i$ to be the connected components of $x$, and then output $(x_i^{\star}, y, z)$.
\end{definition}
In words, the merge distribution should be thought of as combining each connected component of $G_{x,\mu,{\sf merge}}$ into a single alphabet symbol.
\begin{remark}
A few remarks are in order.
  \begin{enumerate}
    \item In Definition~\ref{def:merge} we have defined the $x$-merge operation, and we will also use the $y$-merge operation and $z$-merge
    operation that are defined in an analogous way.
    \item The distributions $\mu$ and $\mu'$ are very closely related, and it is easy to observe that if $\mu$ satisfies all of the conditions of
        Theorem~\ref{thm:nonembed_deg_must_be_small},then $\mu'$ also satisfies all of the conditions of that theorem.
        We omit the straightforward proof, but remark that a master embedding of $\mu$ is translated to a master embedding of $\mu'$ in
        the obvious way (restriction).
    \item The merge operation has already made an appearance in~\cite{BKMcsp1,BKMcsp2}, however in that context the notions of degree/ noise stability are different.
    Here, our notions are more intricate and thus working with the merge operation requires a bit more care.
  \end{enumerate}
\end{remark}

\subsubsection{Simplifying Distributions via Merges}
The following two lemmas allow us to reduce the proof of Theorem~\ref{thm:nonembed_deg_must_be_small}
for a distribution $\mu$, to the proof of the same statement for a simpler distribution $\mu'$ which is a merge of $\mu$.
The first lemma handles $x$-merges, and analogously $z$-merges (as $x$ and $z$ are symmetric in the statement of
Theorem~\ref{thm:nonembed_deg_must_be_small}).
\begin{lemma}\label{lem:merge}
  Let $\Sigma$, $\Gamma$ and $\Phi$ be finite alphabets, and let $\mu$ be a distribution over $\Sigma\times\Gamma\times \Phi$
  for which the support of $\mu_{y,z}$ is full,
  and let $\mu'$ be the $x$-merged distribution coming from $\mu$.
  If the conclusion of Theorem~\ref{thm:nonembed_deg_must_be_small} holds for $\mu'$, then it also holds for $\mu$.
\end{lemma}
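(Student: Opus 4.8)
The plan is to reduce the three--wise correlation over $\mu$ to one over $\mu'$ by replacing $f$ with its average over merge classes, and to control the resulting error using that the merge operation collapses a genuinely connected Markov chain on $\Sigma$. First the bookkeeping: as in the remark following Definition~\ref{def:merge}, every Abelian embedding of $\mu$ is constant on the connected components of $G_{x,\mu,{\sf merge}}$ -- if $x\sim x'$ via a common $(y,z)$ then $\sigma(x)=-\gamma(y)-\phi(z)=\sigma(x')$, and one extends along paths -- so a saturated master embedding $(\sigma,\gamma,\phi)$ of $\mu$ restricts to a saturated master embedding $(\sigma|_{\Sigma'},\gamma,\phi)$ of $\mu'$ into the same $(H,+)$, with the same full support of the pushforward; pairwise connectedness, fullness of $\mu'_{y,z}$, the bounds $\card{\Sigma'}\leq\card{\Sigma}\leq m$ and the lower bound $\alpha$ on atoms are all inherited, so Theorem~\ref{thm:nonembed_deg_must_be_small} applies to $\mu'$ with the same $M,\delta_0,\eta$. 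Moreover $\Gamma'=\Gamma$, $\mu'_y=\mu_y$, and the $y$--component of the master embedding is unchanged, so ${\sf NEStab}_{1-\delta}(g;(\mu'_y)^{\otimes n})={\sf NEStab}_{1-\delta}(g;\mu_y^{\otimes n})$ and the hypothesis on $g$ transfers verbatim.

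For the reduction, let $\mathrm{T}_{{\sf merge}}$ be the reversible Markov chain on $(\Sigma,\mu_x)$ that from $x$ samples $(y,z)\sim\mu(\cdot\mid x)$ and then $x'\sim\mu(\cdot\mid y,z)$; its connected components are exactly those of $G_{x,\mu,{\sf merge}}$ and its transition probabilities are $\Omega_{m,\alpha}(1)$, so by Fact~\ref{fact:MC_eval} its second eigenvalue on each component is at most $1-c$ for some $c=c(m,\alpha)>0$. Let $\mathrm{P}$ be the orthogonal projection of $L_2(\Sigma,\mu_x)$ onto functions constant on merge classes; it is a $1$--bounded averaging operator, and the part of $\mathrm{T}_{{\sf merge}}$ orthogonal to $\mathrm{P}$ contracts by $1-c$. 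Two observations: (i) if $\varphi=\mathrm{P}^{\otimes n}\varphi$ then $\varphi$ equals $\varphi':=\varphi|_{{\Sigma'}^n}$ composed with the representative map $\pi\colon\Sigma\to\Sigma'$, whence $\Expect{\mu^{\otimes n}}{\varphi\, g\, h}=\Expect{{\mu'}^{\otimes n}}{\varphi'\, g'\, h'}$; and (ii) by Cauchy--Schwarz over $(y,z)\sim\mu_{y,z}^{\otimes n}$ and $\norm{gh}_2\leq 1$, for every $\psi$ one has $\card{\Expect{\mu^{\otimes n}}{\psi\, g\, h}}^2\leq\inner{\psi}{\mathrm{T}_{{\sf merge}}^{\otimes n}\psi}$, since the conditional distribution of two independent $x$--copies given $(y,z)$ is exactly $\mathrm{T}_{{\sf merge}}^{\otimes n}$. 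Writing $f=\mathrm{P}^{\otimes n}f+f^{\perp}$, the first summand is handled by (i) and Theorem~\ref{thm:nonembed_deg_must_be_small} for $\mu'$ applied to the $1$--bounded $(\mathrm{P}^{\otimes n}f)|_{{\Sigma'}^n},g',h'$, giving $\card{\Expect{\mu^{\otimes n}}{(\mathrm{P}^{\otimes n}f)\, g\, h}}\leq M\delta^{\eta}$; so it remains to bound $\card{\Expect{\mu^{\otimes n}}{f^{\perp} g h}}$.

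This last estimate is the heart of the argument and the step I expect to be the main obstacle, precisely because the heuristic ``$f$ may be assumed merge--invariant'' must be quantified against the degree of $g$ (the crude bound from (ii) only gives that $\card{\Expect{\mu^{\otimes n}}{f^{\perp}gh}}$ is bounded away from $1$, which is not enough). The plan is to split $g$ by non--embedding degree, $g=g^{{\sf ne}}_{\leq D}+g^{{\sf ne}}_{>D}$ with $D=D(\delta)$ chosen so that $\norm{g^{{\sf ne}}_{\leq D}}_2\leq\sqrt{2\delta}$ (immediate from ${\sf NEStab}_{1-\delta}(g;\mu_y^{\otimes n})\leq\delta$ and Claim~\ref{claim:increase_noise_decrease_stab}); then $\card{\Expect{\mu^{\otimes n}}{f^{\perp}g^{{\sf ne}}_{\leq D}h}}\leq\norm{f^{\perp}}_{\infty}\norm{g^{{\sf ne}}_{\leq D}}_2\norm{h}_2\leq 2\sqrt{2\delta}$. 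For the residual term $\Expect{\mu^{\otimes n}}{f^{\perp}g^{{\sf ne}}_{>D}h}$ one must exploit the merge--class mean--zero coordinates of $f^{\perp}$: pushing the $x$-- and $z$--averages through $\mu$ produces a function of $y$ whose mass on high non--embedding levels decays (through the spectral gap of $\mathrm{T}_{{\sf merge}}$, a random--restriction/noise comparison in the spirit of Lemmas~\ref{lem:noticeable_to_lowdegwt} and~\ref{lem:op_comparison_lemma}, and, if needed, an application of the path trick via Lemma~\ref{lem:from_mu_to_path}, which preserves all the hypotheses) faster than $g^{{\sf ne}}_{>D}$ concentrates there, forcing near--orthogonality. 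The intended output is a bound $\card{\Expect{\mu^{\otimes n}}{f^{\perp}gh}}\leq{\sf poly}_{m,\alpha}(1)\cdot\delta^{\Omega_{m,\alpha}(1)}$; combining it with the $M\delta^{\eta}$ bound from the previous paragraph and adjusting $M$ and $\eta$ completes the proof.
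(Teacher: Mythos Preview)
Your setup is correct and the treatment of the merge-invariant piece $\mathrm{P}^{\otimes n}f$ is fine; that part is actually cleaner than the paper's, since $\mathrm{P}^{\otimes n}f$ is $1$-bounded and passes directly to the $\mu'$-theorem. The gap is entirely in $f^{\perp}$, and your own diagnosis is accurate: bound~(ii) only yields $\inner{f^{\perp}}{\mathrm{T}_{\sf merge}^{\otimes n}f^{\perp}}^{1/2}$, which for non-merged degree $1$ is just $(1-c)^{1/2}$. Splitting $g$ by non-embedding degree does not rescue this. For the residual $\Expect{}{f^{\perp}g^{\sf ne}_{>D}h}$ there is no mechanism for the ``near-orthogonality'' you assert: the merge-class-mean-zero structure of $f^{\perp}$ lives on the $x$-side and says only that $\sigma$ is constant on merge classes, which places no constraint whatsoever on the $y$-non-embedding content of $G(y):=\cExpect{}{y}{f^{\perp}(x)h(z)}$, especially since $h$ is arbitrary. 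Concretely, take $f^{\perp}(x)=\psi(x_1)$ with $\psi$ merge-mean-zero; then $G(y)$ factors through $y_1$ and the conditional of $h$ given $y$, and nothing forces it to be low-degree in the non-embedding sense. None of Lemma~\ref{lem:noticeable_to_lowdegwt}, Lemma~\ref{lem:op_comparison_lemma}, or the path trick supplies this link.

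The fix (which is what the paper does) is to split $f$ not at non-merged degree $0$ but at a threshold $d=\Theta_{m,\alpha}(\log(1/\delta))$: in the eigenbasis of $\mathrm{S}:=\mathrm{T}_{\sf merge}$, let $f_2$ be the part of non-merged degree $\geq d$ and $f_1=f-f_2$. Then your bound~(ii) kills $f_2$ outright, $\card{\Expect{}{f_2gh}}^2\leq\inner{f_2}{\mathrm{S}^{\otimes n}f_2}\leq(1-c)^d\leq\delta$, with no appeal to $g$ at all. The piece $f_1$ has bounded non-merged degree but is neither $1$-bounded nor merge-invariant, so one random-restricts with survival probability $s\approx\delta^{1/3}/d$; a union bound shows the restricted $\tilde f_1$ has all but $\delta^{1/6}$ of its $L_2$-mass at non-merged degree $0$, and on that degree-$0$ part one can pass to $\mu'$. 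The unboundedness of $\tilde f_1$ is then absorbed by one more Cauchy--Schwarz, replacing it with the $1$-bounded function $x\mapsto\cExpect{}{x}{\tilde g(y)\tilde h(z)}$ before invoking Theorem~\ref{thm:nonembed_deg_must_be_small} for $\mu'$. The hypothesis on $g$ is used only to check (via Lemma~\ref{lem:op_comparison_lemma}) that ${\sf NEStab}$ of the restricted $g$ stays small---exactly the step you were reaching for, but deployed to survive the restriction rather than to interact with $f^{\perp}$.
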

\begin{proof}
  Deferred to Section~\ref{sec:merge}.
\end{proof}

The second lemma handles $y$-merges; the proof is similar to the proof of Lemma~\ref{lem:merge} but some additional care is needed.
\begin{lemma}\label{lem:merge2}
  Let $\Sigma$, $\Gamma$ and $\Phi$ be finite alphabets, and let $\mu$ be a distribution over $\Sigma\times\Gamma\times \Phi$
  for which the support of $\mu_{y,z}$ is full,
  and let $\mu'$ be the $y$-merged distribution coming from $\mu$.
  If the conclusion of Theorem~\ref{thm:nonembed_deg_must_be_small} holds for $\mu'$, then it also holds for $\mu$.
\end{lemma}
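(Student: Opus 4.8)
The plan is to reduce the statement for $\mu$ to the statement for its $y$-merge $\mu'$, mirroring the proof of Lemma~\ref{lem:merge} but with the extra bookkeeping that the merge is happening on a coordinate whose function ($g$) is the one carrying the non-embedding stability hypothesis. Suppose $f,g,h$ are $1$-bounded with ${\sf NEStab}_{1-\delta}(g;\mu_y^{\otimes n})\leq\delta$ and $\card{\Expect{(x,y,z)\sim\mu^{\otimes n}}{f(x)g(y)h(z)}}$ large; we want to contradict the conclusion of Theorem~\ref{thm:nonembed_deg_must_be_small} for $\mu'$. Let $C_1,\ldots,C_\ell$ be the connected components of $G_{y,\mu,{\sf merge}}$ with distinguished elements $y_i^{\star}$, and let $\pi\colon\Gamma\to\Gamma'$ collapse each component to its representative. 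The key observation is that, because the support of $\mu_{y,z}$ is full, whenever $y,y'$ lie in the same component there is a genuine averaging in $\mu$ that mixes them; so if $g$ had large ``merge-influence'' on a constant fraction of coordinates the $3$-wise correlation would already be small, by an argument identical in spirit to the one bounding influences in Lemma~\ref{lem:noticeable_to_lowdegwt}. Thus we may assume $g$ is nearly constant across each merge-component on almost all coordinates, and replace it by a function $\tilde g$ that factors through $\pi^{\otimes n}$, i.e. $\tilde g(y)=g'(\pi(y_1),\ldots,\pi(y_n))$ for some $1$-bounded $g'\colon\Gamma'^n\to\mathbb{C}$, losing only a small amount in the correlation.

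Next I would verify that $g'$ still satisfies the non-embedding stability hypothesis with respect to $\mu'_y$. Here the second bullet of the Remark after Definition~\ref{def:merge} is doing the work: a master embedding $\gamma$ of $\mu$ is constant on each merge-component (this needs a short argument — two symbols in the same component share a common $(x,z)$, forcing equal $\gamma$-values by Claim~\ref{claim:embedding_fns}-type reasoning), so it descends to a master embedding $\gamma'$ of $\mu'$, and the operator $\mathrm{T}_{\text{non-embed},1-\delta,\mu'}$ on $\Gamma'$ is exactly the pushforward of $\mathrm{T}_{\text{non-embed},1-\delta,\mu}$ on $\Gamma$. Consequently ${\sf NEStab}_{1-\delta}(g';\mu_y'^{\otimes n})\leq{\sf NEStab}_{1-\delta}(g;\mu_y^{\otimes n})\leq\delta$ (one should be slightly careful: the approximation $\tilde g\approx g$ only gives this up to a small additive error, so one picks $\delta$ in the hypothesis for $\mu$ a constant factor smaller than the $\delta_0$ supplied for $\mu'$, and absorbs the error). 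Since $\mu'$ satisfies all hypotheses of Theorem~\ref{thm:nonembed_deg_must_be_small} (pairwise connectedness, full support of $\mu'_{y,z}$, saturated master embedding with full support on $\{a+b+c=0\}$ — all inherited from $\mu$, again as noted in the Remark), applying the theorem to $\mu'$ with functions $f,g',h$ yields $\card{\Expect{(x,y',z)\sim\mu'^{\otimes n}}{f(x)g'(y')h(z)}}\leq M\delta^\eta$, and pulling this back through $\pi$ gives the same bound (up to the small error) for the original correlation over $\mu$.

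The main obstacle I anticipate is making the ``$g$ is nearly constant across merge-components'' step quantitatively honest while keeping the dependence on $\delta$ of the right shape. One cannot simply replace $g$ by its average over components coordinatewise, because that could destroy the non-embedding stability bound or the boundedness; instead one wants something like: either a $\Omega(1)$-fraction of coordinates have merge-influence $\geq\tau$ — in which case a Cauchy–Schwarz / noise-operator argument shows the $3$-wise correlation is at most $1-\Omega(\tau)$, already small — or else the total merge-influence is tiny and a single averaging operator (the merge-noise operator, an analogue of $\mathrm{T}_{\text{non-embed}}$ for the graph $G_{y,\mu,{\sf merge}}$) moves $g$ to a component-measurable function $\tilde g$ with $\norm{\tilde g - g}_2$ small, hence changes the correlation by a small amount. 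The bookkeeping here is slightly more delicate than in Lemma~\ref{lem:merge} precisely because it is the $g$-function being merged and the $g$-function carrying the stability hypothesis, so one must check the two noise operators (merge-noise on the $G_{y,\mu,{\sf merge}}$ graph, and non-embedding-noise on the $\sigma(y')=\sigma(y)$ graph) interact well — they commute, since the merge-graph edges refine the non-embedding-graph edges, and this is what lets one apply both reductions without circular loss. I would defer the detailed computation to Section~\ref{sec:merge} alongside the proof of Lemma~\ref{lem:merge}, of which this is a close variant.
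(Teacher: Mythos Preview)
Your overall strategy is the paper's strategy—introduce a merge-noise operator on the $G_{y,\mu,{\sf merge}}$ graph, use it to pass to a component-constant function, exploit that merge-edges sit inside master-embedding classes so the two noise operators are compatible, then descend to $\mu'$. The identification of the main obstacle (the merged coordinate is the one carrying the stability hypothesis, so boundedness and ${\sf NEStab}$ must survive) is exactly right.

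There is one genuine gap, though: the dichotomy you propose does not give the correct shape of bound. Saying ``either $\Omega(1)$-fraction of coordinates have merge-influence $\geq\tau$, in which case the correlation is $\leq 1-\Omega(\tau)$'' does not produce $M\delta^{\eta}$; a bound of $1-\Omega(\tau)$ is useless for small constant $\tau$, and there is no reason the merge-influential coordinate set should scale with $n$ or with $1/\delta$ in a way that makes this tensor. The paper does \emph{not} case-split. Instead it takes the merge-noise operator $\mathrm{R}$ (with probability $1-1/d$ stay, else resample uniformly inside the merge-component) at a carefully chosen intermediate scale $d=W\delta^{-1/2}$, and writes $g=\mathrm{R}^{\otimes n}g+(I-\mathrm{R}^{\otimes n})g$. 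The second piece is handled by a Cauchy--Schwarz plus spectral-gap computation giving $\leq\delta^{1/4}$; the first piece is bounded by random restriction (to collapse its small merge-degree to $0$) followed by the hypothesis on $\mu'$. The scale $d\sim\delta^{-1/2}$ is what makes both pieces polynomially small in $\delta$.

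A second, smaller point: your claim that ${\sf NEStab}$ is preserved via ``the operator on $\Gamma'$ is exactly the pushforward'' is not quite how the paper proceeds, and is a bit slippery since the marginals differ. The paper instead shows that the non-embedding-degree-$t$ spaces $V_t$ are invariant under $\mathrm{R}^{\otimes n}$ (precisely because merge-components are contained in master-embedding classes), which yields ${\sf NEStab}_{1-\delta}(\mathrm{R}^{\otimes n}g)\leq{\sf NEStab}_{1-\delta}(g)$ directly. This is the clean statement you want in place of the commutativity heuristic.
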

\begin{proof}
  Deferred to Section~\ref{sec:merge2}.
\end{proof}

Lemmas~\ref{lem:merge},~\ref{lem:merge2} will be important for us later on, when we reduce the statement of Theorem~\ref{thm:nonembed_deg_must_be_small}
to a similar statement in which the distribution $\mu$ is fully merged. We note that for symbols $x,x'\in\Sigma$ that are mapped to different group elements
via the master embedding, that is, $\sigma(x)\neq \sigma(x')$, we could never identify $x$ and $x'$ by a merge. The reason is that if this was possible,
then there would be distinct tuples $(x,y,z)$ and $(x',y,z)$ in the support of $\mu$, so that by the definition of the master embeddings we have
\[
\sigma(x) + \gamma(y) + \phi(z) = 0,
\qquad
\sigma(x') + \gamma(y) + \phi(z) = 0,
\]
and it follows that it must be the case that $\sigma(x) = \sigma(x')$. Thus, merges will never decrease the alphabet sizes below $\card{H}$,
and at some point it will be important for us to consider how large are the alphabet sizes in comparison to the size of $H$.
\subsubsection{What are Merges Good For?}
As explained, the idea of merges will be crucial for us and to illustrate it we consider a special case that nevertheless illustrates
some important concepts. Besides the fact that if $x$ and $x'$ are merge-able then $\sigma(x) = \sigma(x')$,
we are not aware of any other clear obstructions to merges. Thus, a special case that one may consider is what happens when the merge operation successfully
reduces some of the alphabet sizes all the way down to $\card{H}$.

Suppose that after merging, the alphabet $\Sigma$ and $\Phi$ both have size exactly $\card{H}$ so that each one of them could be identified with the Abelian group $H$.
In that case, the partial basis we have set for $x$-functions and $z$-functions
is not partial but rather a full basis, and one expects that multiplying $f(x)h(z)$ one would get an embedding function over $y$. As the non-embedding
stability of $g$ is small, it has small mass on monomials which are embedding functions, and thus it should be the case that the correlation of
$g(y)$ and $f(x)h(z)$ is small. Indeed, arguments along these lines can be made -- and are indeed are crucial in Section~\ref{sec:base_case}.
To illustrate such ideas, below we show an argument along these lines under several additional assumptions on the distribution $\mu$.
\begin{lemma}\label{lem:demonstrate_merge}
  Suppose that $\mu$ is a distribution as in Theorem~\ref{thm:nonembed_deg_must_be_small}, and further suppose that $\card{\Sigma} = \card{\Phi} = H$,
  that the distribution of $(\sigma(x),\gamma(y),\phi(z))$ is uniform over $\sett{(a,b,c)\in H^3}{a+b+c = 0}$ and
  that $\mu_{y,z}$ is uniform, where $(x,y,z)\sim \mu$. Then
  the conclusion of Theorem~\ref{thm:nonembed_deg_must_be_small} holds.
\end{lemma}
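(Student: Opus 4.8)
The plan is to reduce the $3$-wise correlation over $\mu^{\otimes n}$ to a purely Fourier-analytic estimate on the group $H^n$ and then exploit the smallness of the non-embedding stability of $g$. First, since $\card{\Sigma}=\card{\Phi}=\card{H}$ and the master embeddings are saturated, the maps $\sigma$ and $\phi$ are bijections, so we may identify $\Sigma\cong H$ and $\Phi\cong H$ via $\sigma$ and $\phi$. The alphabet $\Gamma$ need not have size $\card{H}$, but the partial basis $B_1\cup B_2$ for $L_2(\Gamma;\mu_y)$ is still available; write $g = g_{\mathrm{emb}} + g_{\mathrm{ne}}$ where $g_{\mathrm{emb}}$ is the projection onto monomials of non-embedding degree $0$ (i.e. onto $\mathrm{Embed}_\gamma(\mu)^{\otimes n}$) and $g_{\mathrm{ne}}$ collects the rest. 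By Fact~\ref{fact:soft_nonbembed_op} and the hypothesis ${\sf NEStab}_{1-\delta}(g;\mu_y^{\otimes n})\le\delta$, we get $\norm{g_{\mathrm{ne}}}_2^2 \lll_{?} $ something controlled — more precisely, $\sum_{\vec u}(1-\delta)^{{\sf nedeg}(\vec u)}\card{\widehat g(\vec u)}^2 \le \delta$ forces $\sum_{{\sf nedeg}(\vec u)\ge D}\card{\widehat g(\vec u)}^2 \le \delta/(1-\delta)^{D}$, so choosing $D = \Theta(\log(1/\delta)/\delta)$ does not by itself make the tail small; instead the right move is: since every term has $(1-\delta)^{{\sf nedeg}(\vec u)}\ge 1 - \delta\,{\sf nedeg}(\vec u)$ is the wrong direction, use instead that $(1-\delta)^{k}\ge e^{-2\delta k}$ for $\delta$ small, hence $e^{-2\delta\,{\sf nedeg}}$ averaged against $\card{\widehat g}^2$ is at most $\delta$, which by Markov gives $\sum_{{\sf nedeg}(\vec u)\ge 1}\card{\widehat g(\vec u)}^2 \le $ … — this is the one genuinely delicate numerical point and I address it below.

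The main structural step: because $\Sigma$ and $\Phi$ are identified with $H$ and the support of $(\sigma(x),\gamma(y),\phi(z))$ is exactly $\{(a,b,c):a+b+c=0\}$ taken uniformly, for any fixed embedding monomial $\vec\chi\in\hat H^{\otimes n}$ on the $y$-side, the function $(x,z)\mapsto \Expect{y}{\overline{\vec\chi_\gamma(y)}\,\mid\,x,z}$ is, after the identification, simply $\prod_i \chi_i(\sigma(x_i))\cdot(\text{a function of }z)$ — concretely $\overline{\chi_i(\gamma(y_i))}$ in the support forces $\gamma(y_i) = -\sigma(x_i)-\phi(z_i)$ so the conditional expectation factorizes as a product function in $x$ times a product function in $z$. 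Consequently $\Expect{\mu^{\otimes n}}{f g_{\mathrm{emb}} h}$ can be rewritten as $\sum_{\vec\chi}\widehat{g_{\mathrm{emb}}}(\vec\chi)\cdot(\text{coefficient of }f\text{ at }\vec\chi)\cdot(\text{coefficient of }h\text{ at }\vec\chi)$, a triple Fourier sum of exactly the Roth/$U^2$-type handled by the classical computation alluded to after Claim~\ref{claim:embedding_fns}; applying Cauchy–Schwarz twice this is at most $\max_{\vec\chi}\card{\widehat{g_{\mathrm{emb}}}(\vec\chi)}\le\norm{g_{\mathrm{emb}}}_2\le 1$, which is useless as stated — so instead we bound it by $\sqrt{{\sf NEStab}_{1}(g_{\mathrm{emb}})}$? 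No: the correct extraction is that since $f,h$ are $1$-bounded, $\card{\Expect{\mu^{\otimes n}}{f g_{\mathrm{emb}} h}}\le \sum_{\vec\chi}\card{\widehat{g_{\mathrm{emb}}}(\vec\chi)}\card{\widehat f(\vec\chi^{x})}\card{\widehat h(\vec\chi^{z})}$ and Cauchy–Schwarz in two of the three factors gives $\le \left(\sum_{\vec\chi}\card{\widehat f}^2\right)^{1/2}\left(\sum\card{\widehat h}^2\right)^{1/2}\max\card{\widehat{g_{\mathrm{emb}}}}$; the point is that $g_{\mathrm{emb}}$ having an individual Fourier coefficient bounded away from $1$ is exactly what fails — this shows the lemma as literally stated cannot be proved this crudely and I must instead combine the two parts.

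Putting it together: $\card{\Expect{\mu^{\otimes n}}{fgh}} \le \card{\Expect{\mu^{\otimes n}}{f g_{\mathrm{emb}} h}} + \card{\Expect{\mu^{\otimes n}}{f g_{\mathrm{ne}} h}}$; the second term is at most $\norm{g_{\mathrm{ne}}}_2 \lll \sqrt\delta$ by Cauchy–Schwarz and the noise-stability bound (this is where the numerical lemma $(1-\delta)^{k}\ge e^{-Ck\delta}\Rightarrow \norm{g_{\mathrm{ne}}}_2^2\le C\delta$ enters cleanly: $\norm{g_{\mathrm{ne}}}_2^2 = \sum_{{\sf nedeg}\ge1}\card{\widehat g}^2 \le \frac{1}{1-(1-\delta)}\sum(1-(1-\delta)^{{\sf nedeg}})\card{\widehat g}^2 = \frac1\delta(\norm g_2^2 - {\sf NEStab}_{1-\delta}(g))\le \frac1\delta(1-({\sf something}))$ — the honest inequality is $\norm{g_{\mathrm{ne}}}_2^2\le \frac{1}{\delta}(\norm{g}_2^2 - {\sf NEStab}_{1-\delta}(g))$ which is $\le \frac1\delta$, still not small, so the truncation must be done at a nonzero level $D$ and the contribution of levels $1,\dots,D$ handled separately — this interplay is the true crux). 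For the first term, I diagonalize: $g_{\mathrm{emb}}=\sum_{\vec\chi\in\hat H^n}\widehat g(\vec\chi)\,(\vec\chi\circ\gamma)$, and the factorization above turns $\Expect{\mu^{\otimes n}}{f(\vec\chi\circ\gamma)h}$ into $\widehat f_{H}(\vec\chi)\,\widehat h_{H}(-\vec\chi)$ where $f_H,h_H$ are $f,h$ viewed on $H^n$; so the first term equals $\sum_{\vec\chi}\widehat g(\vec\chi)\widehat f_H(\vec\chi)\widehat h_H(-\vec\chi)$, bounded by $\max_{\vec\chi}\card{\widehat f_H(\vec\chi)}\cdot\norm g_2\norm{h_H}_2$. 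I then invoke that for $f$ $1$-bounded one always has some $\vec\chi$ with $\card{\widehat f_H(\vec\chi)}$ possibly as large as $1$ (e.g. $f$ itself a character) — so the final bound genuinely requires using $\norm{g_{\mathrm{emb}}}_2$ small, i.e. using the hypothesis on $g$ in the \emph{first} term too, not the second. \textbf{The main obstacle}, then, is precisely this: isolating the correct way in which ${\sf NEStab}_{1-\delta}(g)\le\delta$ limits $\norm{g_{\mathrm{emb}}}_2$ or equivalently forces cancellation in the triple sum; I expect the resolution to be a two-level truncation combined with the observation that the monomials of $g$ at non-embedding degree $0$ carry total mass $\le {\sf NEStab}_{1}(g_{\mathrm{emb}}) $ and that ${\sf NEStab}_{1-\delta}(g)\le\delta$ with $\delta$ small pins ${\sf NEStab}_1(g_{\mathrm{emb}}) = \norm{g_{\mathrm{emb}}}_2^2$ to be $\le\delta$ \emph{only if} one also controls levels $1\le {\sf nedeg}\le D$, which the Markov argument does up to a factor $M = M(m)$ and an exponent $\eta = \eta(m)$ — matching exactly the shape of the conclusion $M\delta^\eta$. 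The rest is the standard Fourier computation over $H^n$ and routine Cauchy–Schwarz, which I will not grind through here.
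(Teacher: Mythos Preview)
Your setup is correct and you do carry out the key Fourier computation: conditioning on $y$ and integrating out $x,z$ produces a function $G(y)=\Expect{(x,z)|y}{f(x)h(z)}$ which, by your own expansion, lies in ${\sf Embed}_\gamma(\mu)^{\otimes n}$. But you then fail to draw the two immediate consequences that finish the proof, and instead talk yourself into a non-existent ``crux'' involving multi-level truncation.

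First, because $G$ is an embedding function, it is orthogonal to every monomial of positive non-embedding degree; hence $\Expect{\mu^{\otimes n}}{f\,g_{\mathrm{ne}}\,h}=\inner{\overline{g_{\mathrm{ne}}}}{G}=0$ exactly. You never need a bound on $\norm{g_{\mathrm{ne}}}_2$ (which indeed is \emph{not} small in general). Second, $\norm{g_{\mathrm{emb}}}_2^2=\sum_{{\sf nedeg}(\vec u)=0}\card{\widehat g(\vec u)}^2\le \sum_{\vec u}(1-\delta)^{{\sf nedeg}(\vec u)}\card{\widehat g(\vec u)}^2={\sf NEStab}_{1-\delta}(g)\le\delta$, since the coefficient $(1-\delta)^0=1$ dominates all others. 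So $\card{\Expect{}{fgh}}=\card{\inner{\overline{g_{\mathrm{emb}}}}{G}}\le\norm{G}_2\norm{g_{\mathrm{emb}}}_2\le\sqrt\delta$, and you are done. The paper packages the same idea slightly differently: since $\T_{\text{non-embed},\sqrt{1-\delta}}$ acts as the identity on the embedding function $G$, one inserts it for free, moves it by self-adjointness onto $g$, and applies Cauchy--Schwarz to get $\norm{\T_{\text{non-embed},\sqrt{1-\delta}}g}_2=\sqrt{{\sf NEStab}_{1-\delta}(g)}\le\sqrt\delta$. Either way there is no truncation, no ``levels $1,\dots,D$'', and no $M\delta^\eta$ loss---the bound is a clean $\sqrt\delta$.
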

\begin{proof}
  Relabeling $\Sigma$ and $\Phi$, we may assume that they are both equal to $H$ and that the master embeddings $\sigma$ and $\phi$ are the identity.
  Define $G(y) = \cExpect{(x',y',z')\sim\mu^{\otimes n}}{y' = y}{f(x')h(z')}$, and expand
  \[
  f(x) = \sum\limits_{\chi\in\hat{H}^{\otimes n}}\widehat{f}(\chi)\chi(x),
  \qquad
  h(z) = \sum\limits_{\chi'\in \hat{H}^{\otimes n}}\widehat{h}(\chi')\chi'(z).
  \]
  Then
  \[
  G(y) =
  \sum\limits_{\chi,\chi'\in\hat{H}^{\otimes n}}\widehat{f}(\chi)\widehat{h}(\chi')
  \cExpect{(x',y',z')\sim\mu^{\otimes n}}{y' = y}{\chi(x')\chi'(z')}.
  \]
  Note that conditioned on $y$, the distribution over $x'$ and $z'$ is uniform such that $x' + \gamma(y) + z' = 0$.
  Thus, the expectation is $0$ if $\chi\neq \chi'$ and otherwise is equal to
  $\chi(x'+z') = \chi(-\gamma(y))$. Thus,
  \[
  G(y) =
  \sum\limits_{\chi\in\hat{H}^{\otimes n}}\widehat{f}(\chi)\widehat{h}(\chi)\chi(-\gamma(y)).
  \]
  It follows that $G$ is an embedding function, that is, it is from ${\sf Embed}_{\gamma}(\mu^{\otimes n})$, and hence it cannot have large correlation
  with $g$. More precisely, the left hand side of Theorem~\ref{thm:nonembed_deg_must_be_small} is equal to
  \[
    \inner{G}{\overline{g}}
    =\inner{\T_{\text{non-embed}, \sqrt{1-\delta}}G}{\overline{g}}
    =\inner{G}{\T_{\text{non-embed}, \sqrt{1-\delta}}^{*}\overline{g}}
    =\inner{G}{\T_{\text{non-embed}, \sqrt{1-\delta}}\overline{g}}.
  \]
  Using Cauchy-Schwarz, this is at most $\norm{G}_2\norm{\T_{\text{non-embed}, \sqrt{1-\delta}}\overline{g}}_2 \leq {\sf NEStab}_{1-\delta}(\overline{g})^{1/2}\leq \sqrt{\delta}$.
\end{proof}
Lemma~\ref{lem:demonstrate_merge} should be thought of as saying that if we managed to reduce the alphabet sizes of $x$ and $z$ all the way to their minimal point,
which is the size of $H$, then we managed to prove Theorem~\ref{thm:nonembed_deg_must_be_small}. There are several ways to think about this; one way is as a sort of
base case of an inductive statement, in which we are trying to prove Theorem~\ref{thm:nonembed_deg_must_be_small} by induction on several parameters, one of which
are the alphabet sizes of $x$ and $z$ (or rather their sum).

Our presentation below will be somewhat different (but morally equivalent), and we will use the idea of Lemma~\ref{lem:demonstrate_merge} to
be able to assume that the size of the alphabet of $x$ exceeds $H$, so that (by the pigeonhole principle)
there are two distinct $x$ symbols that are mapped to the same group element in $H$. Instead of directly trying to argue inductively based on this parameter,
we will use these distinct elements, and the fact that $\mu$ is fully merged, to relate expectations as in Theorem~\ref{thm:nonembed_deg_must_be_small} to
expectations over different distributions $\mu'$ that have richer supports. In other words, in some scenarios we will be able to use Lemma~\ref{lem:demonstrate_merge}
to enrich the support of the distribution $\mu$. Intuitively, this marks significant progress since if we were able to enrich the support of $\mu$ indefinitely,
we would eventually reach a distribution rich enough so that the statement of Theorem~\ref{thm:nonembed_deg_must_be_small} becomes obvious.
To capture this idea (and avoiding explicit iterative arguments of this nature), in the next section we define the notion of maximality of a distribution.

\subsection{Maximality of Distributions}
The next concept we describe is maximality of distributions. There are several equivalent ways of thinking about it, and ultimately it is just a form of induction.
However, one good way to think about it is by assuming that there is a distribution $\mu$ which serves as a counter example for Theorem~\ref{thm:nonembed_deg_must_be_small},
and then trying to come up with a distribution $\nu$ which is a ``maximal'' counter example for the statement, in the sense that it is still a counter example to the statement
but adding any additional atoms to the support of $\nu$ would lead to a distribution for which the statement is true. More formally:
\begin{definition}\label{def:maximal}
  Let $\Sigma$ , $\Gamma$ and $\Phi$, and let $\mu$ be a distribution over $\Sigma\times \Gamma\times \Phi$.
  We say that $\mu$ is maximal if for all distributions $\nu$ over $\Sigma\times\Gamma\times \Phi$ such that
  ${\sf supp}(\mu)\subsetneq {\sf supp}(\nu)$, the statement in Theorem~\ref{thm:nonembed_deg_must_be_small} holds.
\end{definition}

The notions of maximality and merges often work in synergy together. Merges allow us to argue that certain atoms do not exist in
$\mu$. For example, if we know about distinct atoms $(x,y,z)$ and $(x',y',z')$ that are both in the support of $\mu$ where $x\neq x'$, and
we know that in $\mu$ the value of $y,z$ implies $x$ (because we already applied some merges to guarantee that), then we automatically
can conclude that $(x',y,z)$ is not in the support of $\mu$. Thus, whenever we are facing similar looking expectation to
Theorem~\ref{thm:nonembed_deg_must_be_small} which involves a distribution $\nu$ that contains the support of $\mu$ and additionally
$(x',y,z)$, we can appeal to the maximality of $\mu$ and upper bound it as in Theorem~\ref{thm:nonembed_deg_must_be_small}.

\subsubsection{What is Maximality Good For?}
Below, we give a concrete example of such synergy, and once again the argument below is informal in nature.\footnote{Similar instantiations of this synergy appear numerous times in Sections~\ref{sec:base_case},~\ref{sec:reudce_to_homogenous},
and we refer the reader to there for a more formal presentation.} In Theorem~\ref{thm:nonembed_deg_must_be_small} the assumption
that the non-embedding stability of $g$ is small really amounts to saying that $g$ has almost all of its mass on monomials
whose non-embedding degree is at least $\Theta(1/\delta)$. Intuitively, one expects that then the part of $f$ of non-embedding degree
significantly less -- say $O(1/\delta^{0.99})$, would not contribute much to the expectation. Ignoring the $h$ function for a
moment, this would be clear as then there is simply a mis-match of degrees and one could appeal to the orthogonality of non-embedding
functions to embedding functions. Now taking the $h$ function into account once again, there doesn't seem to be a way for it to compensate
for the large mis-match of the non-embedding degrees coming from the $f$ function and those coming from the $g$ function.

Indeed, in the following lemma we show that such expectations must indeed be small, and to do so we appeal to the notion of maximality;
once again, we make some additional simplifying assumptions.
\begin{lemma}\label{lem:demonstrate_maximality}
  Suppose that $\mu$ is a distribution as in Theorem~\ref{thm:nonembed_deg_must_be_small}, and further suppose that $\mu$ is maximal,
  fully merged and $\card{\Sigma} >  \card{H}$. Then for all $\xi>0$ there is $\eta = \eta(m,\alpha,\xi)>0$ such that
  \[
        \card{\Expect{(x,y,z)\sim \mu^{\otimes n}}{\T_{{\sf non-embed}, 1- \delta^{1-\xi}}^{\otimes n} f(x)g(y)h(z)}}\leq M\delta^{\eta}.
  \]
\end{lemma}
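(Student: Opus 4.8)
The plan combines a soft splitting of the (already noised) $x$-function with the notion of maximality, exploiting the hypothesis $\card{\Sigma}>\card{H}$. Write $f':=\T_{{\sf non-embed},1-\delta^{1-\xi}}^{\otimes n}f$ for the function in the statement and split $f'=P+Q$, where $P:=\T_{{\sf non-embed},0}^{\otimes n}f$. By Fact~\ref{fact:soft_nonbembed_op}, $P=\sum_{{\sf nedeg}(u)=0}\widehat f(u)\,u$ depends on $x$ only through $(\sigma(x_1),\dots,\sigma(x_n))$ and is $1$-bounded, while every monomial of $Q:=f'-P$ has non-embedding degree at least $1$ and its coefficient is damped by $(1-\delta^{1-\xi})^{{\sf nedeg}(u)}$; thus $\norm{Q}_\infty\leq2$, and $Q$ (as well as the single-coordinate derivatives of it used below) is, up to an $L_2$-error that is a large power of $\delta$, supported on non-embedding degree at most $D$ for some $D\ll\tfrac1\delta$ (the exponent $1-\xi$ is used exactly here: $D$ can be taken of order $\delta^{\xi-1}\log(1/\delta)$, which is $\ll\tfrac1\delta$ since $\xi>0$). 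Since $g,h$ are $1$-bounded this $L_2$-error costs at most $M\delta^{\eta}$ in any expectation below once $\eta$ is small. On the other hand, the Fourier formula ${\sf NEStab}_{1-\delta}(g;\mu_y^{\otimes n})=\sum_u(1-\delta)^{{\sf nedeg}(u)}\card{\widehat g(u)}^2\leq\delta$ (from the proof of Claim~\ref{claim:increase_noise_decrease_stab}) forces the mass of $g$ on monomials of non-embedding degree at most $\tfrac1{4\delta}\log(1/\delta)$ to be at most $\sqrt\delta$; this ``degree gap'' between $P,Q$ and $g$ is what drives the argument.

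Next comes the use of maximality. Since $\card{\Sigma}>\card{H}$, pigeonhole on the saturated master embedding gives distinct $x_0,x_1\in\Sigma$ with $\sigma(x_0)=\sigma(x_1)$, hence $\sigma_i(x_0)=\sigma_i(x_1)$ for every component $\sigma_i$ of the master embedding of Definition~\ref{def:master_embed}. As $\mu$ is fully merged (Definition~\ref{def:fully_merged}), ``$y,z$ imply $x$'', so there is no $(y,z)$ with both $(x_0,y,z),(x_1,y,z)\in{\sf supp}(\mu)$, i.e.\ $x_0,x_1$ are not merge-able in $\mu$. Let $\nu$ be the distribution obtained by sampling $(x,y,z)\sim\mu$ and then swapping $x_0\leftrightarrow x_1$ with probability $\tfrac12$. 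Then ${\sf supp}(\mu)\subsetneq{\sf supp}(\nu)$, and $\nu$ still satisfies the hypotheses of Theorem~\ref{thm:nonembed_deg_must_be_small} (with $\alpha$ possibly replaced by $\alpha/2$): the new atoms respect $(\sigma,\gamma,\phi)$ because $\sigma(x_0)=\sigma(x_1)$, every embedding of $\nu$ is an embedding of $\mu$ so $(\sigma,\gamma,\phi)$ stays a master embedding, the images are unchanged so it stays saturated and the support of $(\sigma(x),\gamma(y),\phi(z))$ only grows (hence is still the full plane), pairwise connectedness is preserved, and $\nu_y=\mu_y$ so the alphabet sizes and the fullness of $\mu_{y,z}$ are untouched. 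By maximality of $\mu$ (Definition~\ref{def:maximal}), the conclusion of Theorem~\ref{thm:nonembed_deg_must_be_small} holds for $\nu$; since $\nu_y=\mu_y$ we still have ${\sf NEStab}_{1-\delta}(g;\nu_y^{\otimes n})\leq\delta$, so $\card{\Expect{(x,y,z)\sim\nu^{\otimes n}}{F(x)g(y)H(z)}}\leq M\delta^{\eta}$ for all $1$-bounded $F,H$, up to an absorbed constant.

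We conclude in two pieces. Since $\mu$ and $\nu$ induce the \emph{same} joint law on $(\sigma(x),y,z)$ (the swap fixes $\sigma(x)$ and leaves $y,z$ untouched) and $P$ depends on $x$ only through $(\sigma(x_1),\dots,\sigma(x_n))$, we have $\Expect{\mu^{\otimes n}}{P(x)g(y)h(z)}=\Expect{\nu^{\otimes n}}{P(x)g(y)h(z)}$, which is $\leq M\delta^{\eta}$ by the preceding paragraph. For $Q$ we interpolate coordinate by coordinate between $\mu^{\otimes n}$ and $\nu^{\otimes n}$: the $\nu^{\otimes n}$-expectation of $Qgh$ is again $O(M\delta^{\eta})$ by maximality, and each of the $n$ hybrid correction terms is governed by the single-coordinate discrete derivative of $Q$ between $x_0$ and $x_1$, which --- because $\sigma(x_0)=\sigma(x_1)$ --- only sees the $B_2$-part of $Q$ in that coordinate, and so (up to the same negligible $L_2$-error) is supported on non-embedding degree at most $D-1\ll\tfrac1\delta$. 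Summing the corrections and using that $g$ has essentially no non-embedding mass below level $\tfrac1{4\delta}\log(1/\delta)$, one gets a total correction of order $\delta^{c(\xi)}$ for some $c(\xi)>0$; choosing $\eta=\eta(m,\alpha,\xi)$ small and $M$ large then yields the stated bound.

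The main obstacle is the bound on the hybrid correction. The naive estimate --- bounding each correction term by the $L_1$-norm of a one-coordinate derivative of $Q$, using $\norm{g}_\infty,\norm{h}_\infty\leq1$ --- loses a factor growing like $\sqrt{n}$ and is useless. One must genuinely exploit the degree gap \emph{through} the arbitrary function $h$: keeping $g$ isolated on one side of a Cauchy--Schwarz and arguing that once the non-embedding degree of the derivative of $Q$ has been brought below $\tfrac1\delta$, the resulting object --- even after being pushed to the $y$-coordinates bundled with $h$ --- cannot place non-embedding weight where $g$ has its mass. This is the ``$h$ cannot compensate for the mismatch of non-embedding degrees'' phenomenon from the discussion before the lemma; making it rigorous, together with the parallel merge reductions (Lemma~\ref{lem:merge} and its companions), is the content of Sections~\ref{sec:base_case} and~\ref{sec:reudce_to_homogenous}. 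It is only the easy regime $\card{\Sigma}>\card{H}$ assumed here that makes the enrichment $\nu$ --- and hence this clean invocation of maximality --- available.
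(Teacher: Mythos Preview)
Your handling of $P$ is fine, but the $Q$-part is not a proof. A coordinate-by-coordinate hybrid between $\mu^{\otimes n}$ and $\nu^{\otimes n}$ produces $n$ correction terms, and nothing in your argument keeps their sum from growing with $n$; your own final paragraph concedes this and defers the rigorous bound to Sections~\ref{sec:base_case} and~\ref{sec:reudce_to_homogenous}. That is circular (the present lemma is in Section~\ref{sec:max_merg_mot}, which precedes and motivates those sections) and in any case not an argument: the ``degree gap'' heuristic---that a function of non-embedding degree $\ll 1/\delta$ cannot correlate through an arbitrary $1$-bounded $h$ with a $g$ whose mass sits above level $\Theta(1/\delta)$---is essentially the content of Theorem~\ref{thm:nonembed_deg_must_be_small} itself, not a tool available here.

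The paper avoids the split $f'=P+Q$ and the hybrids altogether by a single observation you are missing: $\T_{{\sf non-embed},1-\delta^{1-\xi}}$ is a Markov averaging operator, so it can be absorbed into the measure. Concretely,
\[
\Expect{(x,y,z)\sim\mu^{\otimes n}}{\T_{{\sf non-embed},1-\delta^{1-\xi}}^{\otimes n}f(x)\,g(y)\,h(z)}
=\Expect{(x,y,z)\sim{\mu'}^{\otimes n}}{f(x)\,g(y)\,h(z)},
\]
where $\mu'$ samples $(x,y,z)\sim\mu$ and then resamples $x'\sim\T_{{\sf non-embed},1-\delta^{1-\xi}}x$. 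Since the chain can send any $x$ to any $x'$ with $\sigma(x')=\sigma(x)$, the support of $\mu'$ contains every $(a,b',c')$ with $(a',b',c')\in{\sf supp}(\mu)$ and $\sigma(a)=\sigma(a')$; by $\card{\Sigma}>\card{H}$ and fully-mergedness this is a \emph{strict} enlargement of ${\sf supp}(\mu)$, and maximality applies directly. The only remaining issue is that the new atoms of $\mu'$ have mass $\Theta(\alpha\delta^{1-\xi})$; this is handled by writing $\mu'=\tfrac{\rho}{2}\mu''+(1-\tfrac{\rho}{2})\mu'''$ with ${\sf supp}(\mu'')={\sf supp}(\mu')$ and $\mu''$ having all atoms $\Omega_{\alpha}(1)$, and switching to $\mu''$ via random restriction (Claim~\ref{claim:rr_nestab} guarantees the ${\sf NEStab}$ hypothesis on $g$ survives). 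No splitting of $f'$, no coordinate sums.
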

\begin{proof}
 We may re-interpret the expectation in consideration as
 $\Expect{(x,y,z)\sim \mu'^{\otimes n}}{f(x)g(y)h(z)}$
 wherein the distribution $\mu'$ is defined by taking $(x,y,z)\sim \mu$,
 then $x'\sim \T_{\text{non-embed}, 1-\delta^{1-\xi}} x$, $y' = y$ and $z' = z$
 and outputting $(x',y',z')$. The distribution $\mu'$ has small atoms with probability $\rho = \alpha\delta^{1-\xi}$,
 and to circumvent that we use random restrictions.

 We may write
 $\mu' = \frac{\rho}{2}\mu'' + \left(1-\frac{\rho}{2}\right)\mu'''$ wherein $\mu''$, $\mu'''$ have
 the same support as $\mu'$ and in $\mu''$ the probability of each atom is at least $\alpha' = \alpha'(\alpha) > 0$,
 and now use random restrictions to switch from the probability measure $\mu'$ to the probability measure $\mu''$.
 Sample $J\subseteq_{\rho/2} [n]$ and choose the value of coordinates in $\overline{J}$ according to $\mu'''^{\overline{J}}$. That is, sample $(\tilde{x},\tilde{y},\tilde{z})\sim \mu'''^{\overline{J}}$
 so that the expectation in hand can be seen as an averaging over these restrictions of a similar looking expectation over $\mu''$:
 \[
 \Expect{(x,y,z)\sim {\mu'}^{\otimes n}}{f(x)g(y)h(z)}
 =
 \Expect{\substack{J\subseteq_{\rho/2}[n]\\ (\tilde{x},\tilde{y},\tilde{z})\sim {\mu'''}^{\overline{J}}}}{\Expect{(x',y',z')\sim {\mu''}^{J}}{f'(x')g'(y')h'(z')}}.
 \]
 As most of the mass of $g$ lies on monomials of non-embedding degree at least $\Theta(1/\delta)$, one expects most mass of $g'$
 to lie on non-embedding degree at least $\Theta(\rho/\delta) = \Theta(\delta^{-\xi})$; this is indeed true and can be argued by
 appealing to Claim~\ref{claim:rr_nestab}, but we do not elaborate on it for now. Hence the inner expectation is an expectation
 of the same type as in Theorem~\ref{thm:nonembed_deg_must_be_small}, except that it is over the distribution $\mu''$.

 We are now going to appeal to the maximality of $\mu$ to argue that this inner expectation must be small, and for that we must argue
 that the support of $\mu''$ strictly contains the support of $\mu$. By definition, the support of $\mu$ is contained in the support of $\mu'$,
 and as the support of $\mu''$ is the same as that of $\mu'$, we get that ${\sf supp}(\mu)\subseteq {\sf supp}(\mu')$. Next, we argue
 that this is a strict containment. As $\card{\Sigma} > \card{H}$, by the pigeonhole principle there are distinct $a,a'\in \Sigma$ such that $\sigma(a) = \sigma(a')$, and
 we take $b'\in\Gamma$ and $c'\in\Phi$ such that $(a',b',c')\in {\sf supp}(\mu)$. By definition of $\mu'$ it follows that $(a,b',c')$ is in ${\sf supp}(\mu')$, and
 hence it is in the support of $\mu''$. Note that $(a,b',c')\not\in {\sf supp}(\mu)$, as otherwise the symbols $a$ and $a'$ could be merged, but by assumption the distribution
 $\mu$ is already fully merged.

 We may thus appeal to the maximality of $\mu$ and get the inner expectation is upper bounded by $M\delta^{\xi\eta}$ where
 $\eta>0$ and $M$ only depend on $\alpha$ and $m$, concluding the proof.
\end{proof}

The main benefit of Lemma~\ref{lem:demonstrate_maximality} is that it allows us to assume that not only the $g$ function in Theorem~\ref{thm:nonembed_deg_must_be_small}
has large non-embedding degree, but rather also the $f$ and $h$ functions. Indeed, by Lemma~\ref{lem:demonstrate_maximality} it follows that the expectation
in Theorem~\ref{thm:nonembed_deg_must_be_small} is very close to
\[
\card{\Expect{(x,y,z)\sim \mu^{\otimes n}}{(I-\T_{{\sf non-embed}, 1- \delta^{1-\xi}}^{\otimes n}) f(x)g(y)h(z)}},
\]
and now the new $f$ function, which is $(I-\T_{{\sf non-embed}, 1- \delta^{1-\xi}}^{\otimes n}) f$, can be seen to have high non-embedding degree (while importantly also
remaining bounded). As $f$ and $g$ now have high non-embedding degree, one could make a direct argument saying that the part of $h$ of small non-embedding degree
also has negligible contribution to the above expectation, and hence one is reduced to upper bounding an expectation of the form
\[
\card{\Expect{(x,y,z)\sim \mu^{\otimes n}}{(I-\T_{{\sf non-embed}, 1- \delta^{1-\xi}}^{\otimes n}) f(x)g(y) (I-\T_{{\sf non-embed}, 1- \delta^{1-\xi}}^{\otimes n})h(z)}}.
\]
\paragraph{Proof by tensorization: motivating the base case.}
We will attempt to prove this statement by a tensorization argument, inducting on $n$. Ideally, we would have liked to argue that the worst-case
high-degree non-embedding functions for the above expectation are simply product functions $\prod\limits_{i=1}^{1/\delta} u_i(x_i)$,
$\prod\limits_{i=1}^{1/\delta} v_i(y_i)$ and $\prod\limits_{i=1}^{1/\delta} w_i(z_i)$, in which case the task of proving an upper bound as above reduces to a problem
about univariate functions. There is one important distinction, however, which is that while the $f,g$ and $h$ functions are bounded (in $\ell_{\infty}$-norm),
the functions $u_i,v_i$ and $w_i$ need not be bounded in $\ell_{\infty}$, and instead we will only be able to guarantee $\ell_2$ boundedness. Thus, while
the ``base case'' of the tensorization statement seems much easier than the original statement, we have to prove it for a more general class of functions.

Below, we begin the discussion regarding this base case statement, and present the ``ideal base case'' scenario in which this logic is much simpler, however
which we are not able to guarantee. Nevertheless, exploring the ideal base case scenario carries with it a lot of useful intuition, and in particular further
relations between the utilization of a base case statement and the concepts of ``merges'' and ``maximality''.

\subsection{An Ideal Base Case Scenario in Theorem~\ref{thm:nonembed_deg_must_be_small}}
With the notions of maximality and merging in hand, we can now present an ideal setting in which case the
intuition behind the proof of Theorem~\ref{thm:nonembed_deg_must_be_small} is relatively simple.

Consider a distribution $\mu$ as in Theorem~\ref{thm:nonembed_deg_must_be_small}, and suppose that it is fully merged
as well as maximal. Further suppose that the marginal distribution $\mu_{y,z}$ is uniform over $\Gamma\times \Phi$.
In this case, one notes that for $f\colon \Sigma\to\mathbb{C}$, $g\colon \Gamma\to\mathbb{C}$ and $h\colon \Phi\to\mathbb{C}$
of $2$-norm equal to $1$, one has that
\[
\card{\Expect{(x,y,z)\sim \mu}{f(x)g(y)h(z)}}
\leq
\sqrt{\Expect{(x,y,z)\sim \mu}{\card{f(x)}^2}}
\sqrt{\Expect{(x,y,z)\sim \mu}{\card{g(y)}^2\card{h(z)}^2}}
=\norm{f}_2\norm{g}_2\norm{h}_2
\leq 1,
\]
where we used Cauchy-Schwarz and the fact that $\mu_{y,z}$ is uniform.

Inspecting equality cases for Cauchy-Schwarz, one notes that equality holds if and only if
there is a constant $\theta\in\mathbb{C}$ with absolute value $1$ such that
$f(x) = \theta g(y)h(z)$, and for simplicity we assume that $\theta = 1$. Thus,
equality holds if and only if the triplet $f$, $g$ and $h$ satisfy an
embedding-like equation, except that it is not clear which group one should take.
Taking the principle branch of the logarithm, one gets that
$\log(f(x)) = \log(g(y)) + \log(h(z))\pmod{2\pi {\bf i}}$, and now this is
indeed an equation over an Abelian group (albeit infinite, but we already
saw in Lemma~\ref{lem:turn_infinite_to_finite} that one can convert such
embeddings into finite Abelian group embeddings).
Thus, it follows that the logs form an embedding of $\mu$ into an Abelian group,
and by Claim~\ref{claim:embedding_fns} one may conclude that each one of them
is an embedding function, hence $f\in {\sf Embed}_{\sigma}(\mu)$,
$g\in {\sf Embed}_{\gamma}(\mu)$ and $h\in {\sf Embed}_{\phi}(\mu)$.

In words, we have argued that if univariate functions achieve perfect value of
$\card{\Expect{(x,y,z)\sim \mu}{f(x)g(y)h(z)}}$, then they are embedding functions.
This motivates the following statement, which we refer to as the ``ideal base case'':
\begin{statement}\label{statement:idea_base_case}
  Let $\mu$ be a distribution as above. Then for all $\tau>0$ there is $\lambda>0$
  such that if $f\colon \Sigma\to\mathbb{C}$, $g\colon \Gamma\to\mathbb{C}$ and $h\colon \Phi\to\mathbb{C}$
  are functions with $2$-norm equal to $1$ and $\norm{{\sf Proj}_{{\sf Embed}_{\sigma}(\mu)}(f)}_2\leq 1-\tau$,
  then
  \[
    \card{\Expect{(x,y,z)\sim \mu}{f(x)g(y)h(z)}}\leq 1-\lambda.
  \]
\end{statement}
In words, Statement~\ref{statement:idea_base_case} asserts that if $f$ is somewhat far from all embedding functions,
then the value of $\card{\Expect{(x,y,z)\sim \mu}{f(x)g(y)h(z)}}$ must be bounded away from $1$. Indeed, in light
of the above analysis this is something which is natural to expect; we examined the equality case, and by compactness
type argument it follows that near equality cases can be characterized as ``near embedding functions''.
\footnote{We remark that in our actual argument we are going to need a decent quantitative dependency between
the parameters $\tau$ and $\lambda$, typically a polynomial dependency. Thus, we will not be able to directly
use compactness arguments and we will have to unravel them.}

We take a moment to clarify that Statement~\ref{statement:idea_base_case} as stated is false in general.
The issue in the above logic is that $f$, $g$ and $h$ may take the value $0$ sometime, in which case we cannot apply
the log function, and this turns out to be a rather serious obstacle referred to as the Horn-SAT obstruction.
To resolve this issue we introduce the so-called \emph{Relaxed Base Case}, which we give intuition to in Section~\ref{sec:relaxed_base_case_introduce}
and which is the primary topic of Section~\ref{sec:base_case}.

Having said that, considering the class of distributions $\mu$ satisfying Statement~\ref{statement:idea_base_case} is helpful,
and we now work under the assumption that it holds.
In that case, given $\mu$ and $f$, $g$ and $h$ as in Theorem~\ref{thm:nonembed_deg_must_be_small}, we know that $g$ has almost all of its $\ell_2$ mass on monomials of non-embedding
degree at least $1/\delta$. Using Lemma~\ref{lem:demonstrate_maximality} we may also truncate the low-degree non-embedding degrees of both
$f$ and $h$, so that eventually we need to upper bound an expectation of the form (for simplicity of notation we ignored the $\xi>0$ therein)
\begin{equation}\label{eq:motivate_relaxed}
    \card{\Expect{(x,y,z)\sim \mu}{(I - \T_{\text{non-embed}, 1-\delta}^{\otimes n})f(x)g(y)(I-\T_{\text{non-embed}, 1-\delta}^{\otimes n})h(z)}}.
\end{equation}
In conclusion, we are now reduced to working with the functions $f' = (I - \T_{\text{non-embed}, 1-\delta}^{\otimes n})f$,
$g$ and $h' = (I - \T_{\text{non-embed}, 1-\delta}^{\otimes n})h$, which all have almost all of their mass on monomials with
non-embedding degree at least $1/\delta$.

As the base case gives some a gain of $1-\lambda$ over the trivial bound when we have a
univariate function with some non-embedding components, we expect to make this gain $1/\delta$ times,
once for each non-embedding component in $f'$, $g$ and $h'$; in total, this would yield a
bound of $(1-\lambda)^{1/\delta}$, which is satisfactory for us (and even much better than what
we're shooting for).

If Statement~\ref{statement:idea_base_case} was true, this argument would not be too
far from the truth, and in fact can be made rigorous to work. Alas, as we said it could be the
case that there are non-embedding functions $f$, $g$ and $h$ such that $f(x) = g(y) h(z)$, but
then it is necessarily the case that the function $f$ must vanish somewhere; we refer to such
illegitimate-looking embeddings as Horn-SAT embeddings, and to the existence of which as the
Horn-SAT obstruction.

To bypass the Horn-SAT obstruction we must study the possible vanishing patterns of the function $f$.
We do not know how to argue about this for the distribution $\mu$ itself, and hence we have to once
again move to a closely related distribution $\mu'$ (which is obtained from $\mu$ from a combination
of more path tricks and merges), in which we are able to assert non-trivial information about the
$0$-sets of Horn-SAT embeddings. We defer the precise description of this reduction to Section~\ref{sec:base_case},
and in the next section we give some high level overview of the relaxed base case we are able to guarantee,
how to work with it and the way that it fits in together with the notions of ``merges'' and ``maximality''.

\subsection{On the Relaxed Base Case Scenario in Theorem~\ref{thm:nonembed_deg_must_be_small}}\label{sec:relaxed_base_case_introduce}
\subsubsection{A Naive Relaxed Base Case}
Let $\mu$ be a distribution over $\Sigma\times \Gamma\times \Phi$ as above, and assume that $\card{\Sigma} > \card{H}$.\footnote{We remark
that to justifying this assumption is precisely where results in the spirit of Lemma~\ref{lem:demonstrate_merge} come in handy.}
After a suitable transformation of the distribution $\mu$ into a distribution $\tilde{\mu}$
over $\tilde{\Sigma}\times \tilde{\Gamma}\times \tilde{\Phi}$, we are (morally) able to make the following guarantee:
\begin{statement}\label{statement:idea_relaxed_base_case}
  There exists $\Sigma'\subseteq \tilde{\Sigma}$ of size larger than $\card{H}$,
  such that if $f\colon \tilde{\Sigma}\to\mathbb{C}$, $g\colon \tilde{\Gamma}\to\mathbb{C}$ and $h\colon \tilde{\Phi}\to\mathbb{C}$
  are functions such that $f(x) = g(y)h(z)$ in the support of $\tilde{\mu}$, then $f|_{\Sigma'} \equiv 0$.
\end{statement}
In words, Statement~\ref{statement:idea_relaxed_base_case} tells us that in $\tilde{\mu}$, all Horn-SAT embeddings must be
$0$ on $\Sigma'$. This motivates to attempt to formulate an analog of Statement~\ref{statement:idea_base_case} that instead
of assuming non-trivial projection outside the subspace of embedding functions, assumes some variance of $\Sigma'$. And indeed,
such statement can be proved to be true, but as is it is not very useful for us.

To be more specific, if we took that route and tried to write down an analog of~\eqref{eq:motivate_relaxed}, we would have to define
a notion of degree that corresponds to not being $0$ on $\Sigma'$ and attempt to reduce ourselves to the case where this new notion of degree
for the function $f$ is large (so that we will be able to assert that we are avoiding the Horn-SAT obstruction on many coordinates, hence gaining
some $1-\lambda$ factor). At that point it is important though to keep the function $f$ bounded, and hence to execute this
logic we would need to define an averaging operator corresponding to the Markov chain that mixes inside the set $\Sigma'$
and stays put on elements in $\tilde{\Sigma}\setminus \Sigma'$; this is so that functions that are not constant on $\Sigma'$ would have their $2$-norm
decreased as a result of applying this averaging operator, so as to truncate of the part of $f$ that has low degree with respect to the new notion.

This averaging operator however is incompatible with non-embedding degrees and the non-embedding averaging operator.
The reason is that symbols in $\Sigma'$ may be mapped to different group elements in $H$, in which case embedding functions
also have variance on $\Sigma'$ and hence get their $2$-norm decreased by this averaging operator. We have no hope
of gaining any $1-\lambda$ factor from embedding functions, meaning that while identifying a property of Horn-SAT embeddings
that we can ensure not to happen, we would re-introduce embedding functions into the mix and thus still not have a proper base case to induct on.
There are other manifestations of this issue down the line if one pursues this
direction, but ultimately they all boil down to the fact that the above averaging operator is not necessary a ``sub-averaging
operator'' of our non-embedding operator $\T_{\text{non-embed},1-\delta}$. By that, we mean that there are functions which
the proposed averaging operator contracts, whereas $\T_{\text{non-embed},1-\delta}$ keeps in place.

\subsubsection{The Relaxed Base Case and Effective Non-embedding degrees}
To resolve this issue, we take a subset of $\Sigma'$ on which the master embedding is constant. As $\card{\Sigma'}>\card{H}$, by the pigeonhole
principle there are distinct $x,x'\in \Sigma'$ such that $\sigma(x) = \sigma(x')$, so that we can take $\Sigma_{{\sf modest}} = \{x,x'\}\subseteq \Sigma'$.
We then have the following relaxed form of our ideal base case from above:
\begin{statement}\label{statement:relaxed_base_case}
  For all $\tau>0$ there is $\lambda>0$ such that if $f\colon \tilde{\Sigma}\to\mathbb{C}$, $g\colon \tilde{\Gamma}\to\mathbb{C}$ and $h\colon \tilde{\Phi}\to\mathbb{C}$
  are functions with $2$-norm equal to $1$ such that $\Expect{x,x'\in\Sigma_{{\sf modest}}}{\card{f(x) - f(x')}^2}\geq \tau$,
  then
   \[
    \card{\Expect{(x,y,z)\sim \tilde{\mu}}{f(x)g(y)h(z)}}\leq 1-\lambda.
  \]
\end{statement}
In words, for univariate functions, if our function $f$ has a little bit of variance on $\Sigma_{{\sf modest}}$, then we immediately get a gain of
$1-\lambda$ over the trivial bound. We remark that the property
of having variance on $\Sigma_{{\sf modest}}$ immediately prohibits $f$ from being an embedding function (as any embedding function is constant
on $\Sigma_{{\sf modest}}$), as well as from being part of a Horn-SAT embedding (as any Horn-SAT embedding must vanish on $\Sigma'$ and hence
on $\Sigma_{{\sf modest}}\subseteq \Sigma'$).

This motivates defining a certain notion of degree, which we refer to as ``effective non-embedding degree''; we often abbreviate this and just say effective degree
instead.
The effective non-embedding degree of a monomial over $x$ is the number of coordinates on which the corresponding component has variance over $\Sigma_{{\sf modest}}$,
and intuitively this measures the number of times we will gain a factor of $1-\lambda$ by appealing to Statement~\ref{statement:relaxed_base_case}.
To make this definition more precise, we have to refine the basis we constructed consisting of embedding functions and non-embedding functions,
and set-up an orthonormal basis of $L_2(\tilde{\Sigma}, \tilde{\mu}_x)$ composed on: (1) embedding functions, (2) non-embedding functions that
are constant on $\Sigma_{{\sf modest}}$, and (3) non-embedding functions that have variance on $\Sigma_{{\sf modest}}$. Then, the effective (non-embedding) degree
of a monomial is the number of components in it of functions not constant on $\Sigma_{{\sf modest}}$.

\subsubsection{Working with the Relaxed Base Case}
Taking inspiration
from the above discussion, one is tempted to argue that just like in~\eqref{eq:motivate_relaxed} we managed to argue that the non-embedding degree of $f$ can
be assumed to be large, we should also manage to assume that the effective degree of $f$ is large. This is indeed possible, and to do so we identify
a proper Markov chain that captures effective degree; we refer to this Markov chain as ``the modest Markov chain'', and it is defined as follows.
On $a\in \tilde{\Sigma}\setminus \Sigma_{{\sf modest}}$ the chain stays in place, and on $a\in \Sigma_{{\sf modest}}$ the chain
re-samples a symbol from $\Sigma_{{\sf modest}}$ according to the marginal distribution of $\tilde{\mu}_x$ on $\Sigma_{{\sf modest}}$.

With the modest Markov chain in hand we can define a corresponding averaging operator,
$\mathrm{E}_{\text{non-embed}, 1-\delta}$ from $L_2(\tilde{\Sigma}, \tilde{\mu}_x)$ to
$L_2(\tilde{\Sigma}, \tilde{\mu}_x)$ defined as $\mathrm{E}_{\text{non-embed}, 1-\delta}f(x) = \Expect{x'\sim \mathrm{E}_{\text{non-embed}, 1-\delta} x}{f(x')}$,
where in $x'\sim \mathrm{E}_{\text{non-embed}, 1-\delta} x$ we take $x' = x$ with probability $1-\delta$, and otherwise we sample $x'$ according to the
modest Markov chain on $x$. This Markov chain can be shown to precisely capture the notion of effective degrees, and hence our task now is to justify that
we can assume that $f$ has high effective degree, in the sense that we can reduce the task of proving that $\card{\Expect{(x,y,z)\sim \tilde{\mu}^{\otimes n}}{f(x)g(y)h(z)}}$
is small to an analog of~\eqref{eq:motivate_relaxed} of the form:
\begin{equation}\label{eq:motivate_relaxed2}
    \card{\Expect{(x,y,z)\sim \mu}{(I - \mathrm{E}_{\text{non-embed}, 1-\delta}^{\otimes n})f(x)g(y)(I-\T_{\text{non-embed}, 1-\delta}^{\otimes n})h(z)}},
\end{equation}
is small. Towards this end we must argue that the contribution from the part of $f$ of small effective degree is small, and this is once again where maximality
and merges come into play.
\begin{lemma}\label{lem:demonstrate_maximality2}
  Suppose that $\tilde{\mu}$ is a distribution as above, and further suppose that $\tilde{\mu}$ is maximal
  and fully merged. Then for all $\xi>0$ there is $\eta = \eta(m,\alpha,\xi)>0$ such that
  \[
        \card{\Expect{(x,y,z)\sim \mu^{\otimes n}}{\mathrm{E}_{{\sf non-embed}, 1- \delta^{1-\xi}}^{\otimes n} f(x)g(y)h(z)}}\leq M\delta^{\eta}.
  \]
\end{lemma}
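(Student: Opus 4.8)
The plan is to mirror the structure of the proof of Lemma~\ref{lem:demonstrate_maximality}, replacing the non-embedding operator $\T_{\text{non-embed}}$ by the modest operator $\mathrm{E}_{\text{non-embed}}$ throughout, and then invoking maximality of $\tilde\mu$ at the very end. Concretely, I would first re-interpret the expectation $\Expect{(x,y,z)\sim\mu^{\otimes n}}{\mathrm{E}_{\text{non-embed},1-\delta^{1-\xi}}^{\otimes n}f(x)g(y)h(z)}$ as $\Expect{(x,y,z)\sim\mu'^{\otimes n}}{f(x)g(y)h(z)}$ for the distribution $\mu'$ obtained by sampling $(x,y,z)\sim\tilde\mu$, then $x'\sim\mathrm{E}_{\text{non-embed},1-\delta^{1-\xi}}x$, $y'=y$, $z'=z$, and outputting $(x',y',z')$. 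This distribution $\mu'$ has the same support on $(y,z)$ as $\tilde\mu$ and its $x$-support is enlarged: every atom $(x',y,z)$ with $\sigma(x')=\sigma(x)$, $x'\in\Sigma_{\sf modest}$ (or $x'=x$) for some $(x,y,z)\in{\sf supp}(\tilde\mu)$ is now present. Its smallest atom has probability $\rho=\alpha\delta^{1-\xi}$ (up to constants depending on $m,\alpha$).

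Next, to handle the tiny atoms I would apply a random restriction exactly as in Lemma~\ref{lem:demonstrate_maximality}: write $\mu'=\tfrac{\rho}{2}\mu''+(1-\tfrac{\rho}{2})\mu'''$ where $\mu'',\mu'''$ share the support of $\mu'$ and the atoms of $\mu''$ are $\Omega_{m,\alpha}(1)$. Sampling $J\subseteq_{\rho/2}[n]$ and a restriction $(\tilde x,\tilde y,\tilde z)\sim\mu'''^{\overline J}$, the expectation becomes an average over restrictions of a similar expectation over $\mu''^{J}$, with restricted functions $f',g',h'$. The point is that the non-embedding stability hypothesis on $g$ survives: since $g$ has almost all of its mass on monomials of non-embedding degree $\Theta(1/\delta)$, Claim~\ref{claim:rr_nestab} (applied with $\mathcal D=\tilde\mu_y$, $\mathcal D'$ the conditional on the alive coordinates, $\mathcal D''$ the one on the dead coordinates, and $\beta=\rho/2$) guarantees that in expectation $g'$ has ${\sf NEStab}_{1-\delta'}(g';\cdot)\le\delta'$ for $\delta'=\Theta(\rho\delta)=\Theta(\delta^{2-\xi})$, so the inner expectation is genuinely of the form controlled by Theorem~\ref{thm:nonembed_deg_must_be_small}, but over the distribution $\mu''$ rather than $\tilde\mu$.

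Finally, I would verify that $\mu''$ strictly contains $\tilde\mu$ in support, so that maximality of $\tilde\mu$ (Definition~\ref{def:maximal}) applies. Containment $ {\sf supp}(\tilde\mu)\subseteq{\sf supp}(\mu')={\sf supp}(\mu'')$ is immediate from the definition of $\mu'$ (take $x'=x$). For strictness, use $\Sigma_{\sf modest}=\{a,a'\}$ with $a\ne a'$ and $\sigma(a)=\sigma(a')$: pick $b',c'$ with $(a',b',c')\in{\sf supp}(\tilde\mu)$; then $(a,b',c')\in{\sf supp}(\mu')$ because $\mathrm{E}_{\text{non-embed}}$ moves $a'$ to $a$ with positive probability, yet $(a,b',c')\notin{\sf supp}(\tilde\mu)$ since otherwise $a$ and $a'$ would be $x$-mergeable, contradicting that $\tilde\mu$ is fully merged. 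Hence maximality gives the inner expectation is at most $M{\delta'}^{\eta_0}=M\cdot\Theta(\delta^{(2-\xi)\eta_0})$ for the $M,\eta_0$ from the theorem, and averaging over restrictions preserves this bound; setting $\eta=(2-\xi)\eta_0$ (or any smaller positive constant depending only on $m,\alpha,\xi$, and absorbing constants into $M$) finishes the proof.

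The main obstacle I anticipate is the bookkeeping around $\mathrm{E}_{\text{non-embed}}$ in the random-restriction step: unlike $\T_{\text{non-embed}}$, the modest operator only re-randomizes within $\Sigma_{\sf modest}$, so one must check that the modest Markov chain is still compatible with the ``change of underlying measure'' decomposition $\mu'=\tfrac\rho2\mu''+(1-\tfrac\rho2)\mu'''$ and that Claim~\ref{claim:rr_nestab}, stated for the non-embedding operator, transfers — here it is actually the non-embedding stability of $g$ (not the effective stability of $f$) that we need to track, and $\mathrm{E}_{\text{non-embed}}$ acts only on the $x$-coordinate, so the claim applies verbatim to $g$. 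A secondary point needing care is that after restriction the alive-coordinate conditionals $\mathcal D',\mathcal D''$ have atoms $\Omega_{m,\alpha}(1)$ and $\mathcal D=\beta\mathcal D'+(1-\beta)\mathcal D''$ holds with the right $\beta$; this is routine but must be spelled out to legitimately invoke Claim~\ref{claim:rr_nestab}. Everything else is a direct transcription of the argument already given for Lemma~\ref{lem:demonstrate_maximality}.
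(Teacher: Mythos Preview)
Your proposal is correct and follows essentially the same approach as the paper's proof, which is itself just a sketch pointing back to Lemma~\ref{lem:demonstrate_maximality}: reinterpret the expectation over the enlarged distribution $\mu'$, decompose $\mu'=\tfrac{\rho}{2}\mu''+(1-\tfrac{\rho}{2})\mu'''$, random-restrict to $\mu''$, track ${\sf NEStab}$ of $g$ via Claim~\ref{claim:rr_nestab}, and invoke maximality using that $\Sigma_{\sf modest}$ plus full-mergedness force strict support containment. One minor quantitative slip: after restricting with $\beta=\rho/2$, Claim~\ref{claim:rr_nestab} gives $\E[{\sf NEStab}_{1-\delta'}(g')]\le{\sf NEStab}_{1-c\beta\delta'}(g)$, so to bound the right side by $\delta$ you need $\delta'\gtrsim\delta/\rho\approx\delta^{\xi}$ (not $\delta'=\Theta(\rho\delta)$), yielding a final bound of order $M\delta^{\xi\eta_0}$ rather than $M\delta^{(2-\xi)\eta_0}$ --- but this does not affect the argument.
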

\begin{proof}
  The proof is almost identical to the proof of Lemma~\ref{lem:demonstrate_maximality}, and we only sketch it. Re-interpreting this expectation as
  $\Expect{(x,y,z)\sim {\mu'}^{\otimes n}}{f(x)g(y)h(z)}$ where the distribution $\mu'$ is the distribution in
  which we first sample $(x',y',z')\sim \tilde{\mu}$, then $x\sim \mathrm{E}_{{\sf non-embed}, 1- \delta^{1-\xi}}^{\otimes n}x'$,
  take $y=y'$, $z=z'$ and output $(x,y,z)$. As in Lemma~\ref{lem:demonstrate_maximality},
  the support of $\mu'$ strictly contains the support of $\tilde{\mu}$  and hence it makes sense to try to appeal to the maximality of $\tilde{\mu}$.
  The only issue is that in $\tilde{\mu}$ there are atoms with small probability $\rho = \alpha \delta^{1-\xi}$, and to bypass that we write
  $\mu' = \frac{\rho}{2}\mu'' + \left(1-\frac{\rho}{2}\right)\mu'''$ where $\mu''$ and $\mu'''$ have the same supports as $\mu'$ and in $\mu''$
  the probability of each atom is at least $\alpha'(\alpha)>0$, and then use random restrictions as in Lemma~\ref{lem:demonstrate_maximality}.
\end{proof}

\subsection{Some Additional Remarks on Combining These Ingredients}\label{sec:additional_remarks_max}
Throughout this section we have proved some useful lemmas regarding the interaction of expectations as in Theorem~\ref{thm:nonembed_deg_must_be_small}
and the notions of merges, maximality and how the relaxed base case fits in. Ideally, we would have liked to have a distribution $\mu$ that possess
all of the properties that we needed (on top of the ones assumed in Theorem~\ref{thm:nonembed_deg_must_be_small}):
(1) the alphabet of $x$ has size larger than $\card{H}$,
(2) $\mu_{y,z}$ is uniform,
(3) $\mu$ is fully merged,
(4) $\mu$ admits a relaxed base case statement as in Statement~\ref{statement:relaxed_base_case},
(5) $\mu$ is maximal.

We are not going to be able to ensure that all of these properties simultaneously occur for $\mu$. Instead, we will
argue that the distribution $\mu$ ``contains within it'' some other distribution $\nu$ (possibly on different alphabets)
on which some of these properties hold. More specifically, we are not going to be able to guarantee that $\mu$ is maximal,
and instead we will be able to argue that ``within it'' there is a maximal distribution. By that, we mean that are $\Sigma'$ and $\Phi'$,
a distribution $\nu$ over $\Sigma'\times \Gamma\times \Phi'$ which is maximal and maps $a\colon \Sigma'\to\Sigma$ and $c\colon \Phi'\to\Phi$ such that:
\begin{enumerate}
  \item \textbf{Containment:} $\sett{(a(x),y,c(z))}{(x,y,z)\in {\sf supp}(\nu)}\subseteq {\sf supp}(\mu)$;
  \item \textbf{Alignment of Master Embeddings:}
  taking $(\sigma,\gamma,\phi)$ to be a master embedding of $\mu$, we have that $(\sigma\circ a,\gamma, \phi\circ c)$
  is a master embedding of $\nu$.
\end{enumerate}
Intuitively, the reason that this is useful is that, after suitable random restrictions, we can relate expectations with respect to $\mu$ to
expectations with respect to $\nu$. Indeed, letting $\mu'$ be the condition distribution of $\mu$ on
$\sett{(a(x),y,c(z))}{(x,y,z)\in {\sf supp}(\nu)}$ and writing $\mu = \beta \mu' + (1-\beta)\mu''$ for some distribution $\mu''$ and $\beta = \beta(\alpha)>0$,
we can switch from the distribution $\mu$ to the distribution $\mu'$ (as in the proof of Lemma~\ref{lem:demonstrate_maximality}), and functions over this
domain can be lifted to functions over the domain of $\nu$:
\[
f'(x') = f(a(x'_1),\ldots,a(x'_n)),
\qquad
g'(y') = g(y')
\qquad
h'(z') = h(c(z'_1),\ldots,c(z'_n)).
\]
We thus managed to reduce the problem of bounding some expectation with respect to $\mu$ to the task of bounding some expectation with respect to
$\nu$. Often times, this line of reasoning (on top of arguments as above) will allow us to appeal to the maximality of $\nu$ (and get a result which
is qualitatively the same as if we could assume that $\mu$ itself is maximal).

\section{Arranging for a Base Case for Theorem~\ref{thm:nonembed_deg_must_be_small}}\label{sec:base_case}
In this section, we begin the proof of Theorem~\ref{thm:nonembed_deg_must_be_small}.
As explained earlier, the core of our argument will ultimately be by induction on $n$, and as such we are going to need a base case statement for functions over a single variable.
Our inductive process though will be unable to preserve $1$-boundedness and will only be able to give us $L_2$-bound guarantees. Therefore, the base case
we are looking for has to address functions with a bounded $L_2$-norm.

The most naive attempts at arriving at such base case lead one to a difficulty referred to as the ``Horn-SAT'' obstruction,
which refers to the possibility of a existence of a triplet of functions $f,g$ and $h$ that satisfy $f(x) = g(y)h(z)$
on the support on $\mu$ but that do not necessarily yield an Abelian embedding; this difficulty arises due to the fact that $f$ may be $0$ on some inputs.

In this section, our goal is to state a result that implies Theorem~\ref{thm:nonembed_deg_must_be_small}, and which is more amendable to a proof
by induction along the lines of~\cite{BKMcsp2}. To do so, we will first have to go through some reductions and present intermediate statements
which imply Theorem~\ref{thm:nonembed_deg_must_be_small}; roughly speaking, these statement will all be similar to the statement of Theorem~\ref{thm:nonembed_deg_must_be_small}
with additional assumptions on the distribution $\mu$.

\subsection{Further Preprocessing of the Distribution $\mu$: Pushing Counter-examples to the Extreme}

At a high level, the goal of our preprocessing step is to arrive at a distribution $\mu'$ (which may be different from $\mu$) such
that if Theorem~\ref{thm:nonembed_deg_must_be_small} is false for $\mu$, then it is also false for $\mu'$, but moreover
$\mu'$ is the ``richest'' distribution on which the statement remains false. That is, if we consider any distribution $\nu$
satisfying the conditions of Theorem~\ref{thm:nonembed_deg_must_be_small} whose support strictly contains the support of
$\mu'$, then the conclusion of that theorem is satisfied for $\nu$.
To be more precise, assuming the statement is false for a distribution $\mu$, the distribution $\mu'$ will be a result of applying a sequence
of the following two operations, so long as it is possible.
\begin{enumerate}
  \item \textbf{Merging Symbols:} if there are distinct symbols $x,x'\in\Sigma$ for which there are $y\in\Gamma$ and $z\in\Phi$ such that
  $(x,y,z)$ and $(x',y,z)$ are both in ${\sf supp}(\mu)$, then we can define a distribution $\mu'$ over $\Sigma'\times \Gamma\times \Phi$ where $\Sigma'\subsetneq\Sigma$,
  such that Theorem~\ref{thm:nonembed_deg_must_be_small} holds for $\mu$ if and only if it holds for $\mu'$. This is done via Lemma~\ref{lem:merge}, and by symmetry of
  the roles of $x$ and $z$ the same goes for symbols in $\Phi$.
  \item \textbf{Enlarging the support:} looking at $\mu$ for which the statement is false,
  we ask ourselves whether there are additional atoms from $\Sigma\times\Gamma\times \Phi$ that can be inserted to ${\sf supp}(\mu)$ so that the statement remains false.
  If so, we pass from $\mu$ to another distribution over $\Sigma\times\Gamma\times \Phi$ for which the statement is still false and whose support strictly contains
  the support of $\mu$.
\end{enumerate}
Repeating the above steps so long that it is possible (noting that it eventually terminates as we are either reducing the alphabet sizes or enlarging the size of
${\sf supp}(\mu)$ at each step), we reach a distribution $\mu$ on which Theorem~\ref{thm:nonembed_deg_must_be_small} is false and is extremal
in these regards. More precisely:

\begin{lemma}\label{lem:pushing_counter_to_extreme}
  For all $m\in\mathbb{N}$ there is $m'\in\mathbb{N}$ such that the following holds.
  Suppose that $\Sigma$, $\Gamma$ and $\Phi$ are alphabets of size at most $m$ and $\mu$ is a distribution over $\Sigma\times\Gamma\times\Phi$
  for which Theorem~\ref{thm:nonembed_deg_must_be_small} fails.
  Then there exist alphabets $\Sigma'$, $\Gamma'$ and $\Phi'$ of size at most $m'$ and a distribution $\mu'$ over $\Sigma'\times\Gamma'\times\Phi'$ such that
  \begin{enumerate}
    \item The distribution $\mu'$ is fully merged as per Definition~\ref{def:fully_merged}.
    \item The distribution $\mu'$ is maximal as per Definition~\ref{def:maximal}.
    \item The conclusion of Theorem~\ref{thm:nonembed_deg_must_be_small} fails for $\mu'$ but it satisfies all of its conditions.
  \end{enumerate}
\end{lemma}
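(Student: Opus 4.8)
\textbf{Proof proposal for Lemma~\ref{lem:pushing_counter_to_extreme}.}

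The plan is to run the obvious greedy/iterative process on counter-examples, showing that it terminates with a distribution having the two extremal properties. We start with the given distribution $\mu$ over $\Sigma\times\Gamma\times\Phi$ for which Theorem~\ref{thm:nonembed_deg_must_be_small} fails, and repeatedly apply whichever of the following two moves is available. First, if $\mu$ is not fully merged, then by Definition~\ref{def:fully_merged} one of $y,z\to x$, $x,z\to y$, or $x,y\to z$ fails; say $y,z$ does not imply $x$. We then apply the $x$-merge operation of Definition~\ref{def:merge} to obtain $\mu'$ over $\Sigma'\times\Gamma\times\Phi$ with $\card{\Sigma'}<\card{\Sigma}$, and by Lemma~\ref{lem:merge} (respectively Lemma~\ref{lem:merge2} in the $y$-merge case, and the $z$-analog of Lemma~\ref{lem:merge}) the conclusion of Theorem~\ref{thm:nonembed_deg_must_be_small} still fails for $\mu'$: if it held for $\mu'$ it would hold for $\mu$, contradiction. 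Second, if $\mu$ is fully merged but not maximal, then by Definition~\ref{def:maximal} there is a distribution $\nu$ over the same alphabets with ${\sf supp}(\mu)\subsetneq{\sf supp}(\nu)$ for which the conclusion of Theorem~\ref{thm:nonembed_deg_must_be_small} also fails; we replace $\mu$ by such a $\nu$. A minor subtlety here is that the move must keep $\nu$ satisfying all the hypotheses of Theorem~\ref{thm:nonembed_deg_must_be_small} (probability of each atom $\geq\alpha$, pairwise connectedness with full $\mu_{y,z}$-support, existence of a saturated master embedding with full support of $(\sigma(x),\gamma(y),\phi(z))$); I would observe that enlarging the support of $\mu$ only helps pairwise connectedness and the full-support conditions, and that the atom-probability lower bound is part of what ``the conclusion fails'' means in the statement of the theorem (the quantifier over $\delta\leq\delta_0$ together with the $\alpha$-dependence of $M,\delta_0,\eta$), so one keeps track of a fixed $\alpha'=\alpha'(m,\alpha)$ and $m'$ throughout — this is where the bound $m'$ in the statement comes from.

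The termination argument is the heart of the bookkeeping: define the potential $\Psi(\mu) = (\card{\Sigma}+\card{\Gamma}+\card{\Phi},\ \binom{|\Sigma||\Gamma||\Phi|}{\phantom{0}} - \card{{\sf supp}(\mu)})$ ordered lexicographically — wait, more cleanly, let $\Psi(\mu)=\bigl(\card{\Sigma}+\card{\Gamma}+\card{\Phi},\ -\card{{\sf supp}(\mu)}\bigr)$ under lexicographic order. A merge move strictly decreases the first coordinate; an enlarge-support move keeps the first coordinate fixed and strictly decreases the second (increases the support size). Since the first coordinate is a positive integer bounded below (in fact, as remarked after Lemma~\ref{lem:merge2}, merges never push an alphabet below $\card{H}$, and $\card{H}$ is controlled by $m$) and, for a fixed alphabet triple, the support size is bounded above, the process terminates. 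The output $\mu'$ is then fully merged (no merge move available) and maximal (no enlarge move available), and the conclusion of Theorem~\ref{thm:nonembed_deg_must_be_small} fails for it while all its hypotheses hold. Finally one notes that all alphabet sizes stayed bounded by a function $m'$ of $m$: merges only shrink alphabets, and enlarge-support moves do not change them, so $m'=m$ suffices — but I would state $m'$ as a generic bound to leave room in case the formal treatment of the master-embedding hypothesis requires passing through a slightly larger alphabet.

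The main obstacle I anticipate is not the termination (which is routine) but verifying that the enlarge-support move genuinely preserves all four hypotheses of Theorem~\ref{thm:nonembed_deg_must_be_small}, in particular hypothesis~(4) about the existence of a saturated master embedding whose induced distribution on $H^3$ has full support. Enlarging the support could, a priori, destroy \emph{some} Abelian embeddings of $\mu$ (an added atom $(x,y,z)$ imposes a new linear constraint $\sigma(x)+\gamma(y)+\phi(z)=0$), which changes the master embedding; one must check that the new master embedding is still saturated and still has full support on its $\{(a,b,c):a+b+c=0\}$ set. I would handle this by arguing that we only ever need the move to be \emph{possible} — i.e., Definition~\ref{def:maximal} already quantifies over distributions $\nu$ for which ``Theorem~\ref{thm:nonembed_deg_must_be_small} holds'', and ``holds'' is only meaningful for $\nu$ satisfying the hypotheses; so by definition a non-maximal $\mu$ admits a $\nu$ with strictly larger support \emph{that still satisfies the hypotheses} and still fails the conclusion. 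In other words, the hypothesis-preservation is built into the definition of maximality, and the lemma is essentially a packaging statement. The only real content is then: (a) the merge lemmas (already proved), (b) termination, and (c) the observation that merges also preserve the hypotheses (noted in the remark after Definition~\ref{def:merge} and in the discussion following Lemma~\ref{lem:merge2}, where it is pointed out that a master embedding of $\mu$ restricts to one of the merged distribution and that merges never drop an alphabet below $\card{H}$). I would write the proof in three short paragraphs along exactly these lines, deferring the merge-preserves-hypotheses verification to the (already-promised) proofs in Sections~\ref{sec:merge} and~\ref{sec:merge2}.
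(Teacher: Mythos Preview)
Your proposal is correct and follows essentially the same iterative merge/enlarge-support scheme as the paper. The one substantive difference is in how you handle hypothesis preservation under the enlarge-support move: you argue it is automatic from the definition of maximality (reading ``the statement of Theorem~\ref{thm:nonembed_deg_must_be_small} fails for $\nu$'' as ``the hypotheses hold and the conclusion fails''), whereas the paper explicitly verifies that condition~(4) survives by showing that any Abelian embedding of $\nu$ is already an embedding of $\mu$, so the master embedding of $\nu$ is obtained by keeping a subset of coordinates of that of $\mu$, and remains saturated with full support on the appropriate (possibly smaller) group $H_{\leq r'}$. Your reading of the definition is defensible, but the paper's explicit computation is not redundant: the precise relationship between the master embeddings of $\mu$ and $\nu$ is used downstream (see the Remark immediately following the proof) to argue that the non-embedding noise operator of $\nu$ is weaker than that of $\mu$, which is needed when invoking maximality later. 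So while your argument proves the lemma as stated, you would still need to record the master-embedding relationship separately.
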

\begin{proof}
  If $\mu$ is not merged, we perform a merge operation and use either Lemma~\ref{lem:merge} or Lemma~\ref{lem:merge2}, noting that all of the conditions of Theorem~\ref{thm:nonembed_deg_must_be_small}
  continue to hold for the new distribution. If $\mu$ is not maximal, we keep greedily form a sequence of distributions by adding elements form $\Sigma\times\Gamma\times\Phi$
  to the support of the distribution so long as Theorem~\ref{thm:nonembed_deg_must_be_small} fails for it, and eventually reach a maximal distribution.
  We then iterate. Note that this process terminates, as each invocation of merge decreases the size of the alphabet $\Sigma$ by at least $1$.

  A delicate point to notice, is that if we add an element to $\mu$ to form a distribution $\nu$,
  then all of the conditions of Theorem~\ref{thm:nonembed_deg_must_be_small}
  continue to hold. Indeed, if $\sigma,\gamma,\phi$ is an Abelian embedding of $\nu$, then it is also an Abelian embedding of $\mu$ hence
  it is a coordinate of the master embedding of $\mu$. Thus we can keep a subset of coordinates of the master embeddings of $\mu$ (which correspond
  to embeddings of $\nu$) and have that these form a master embedding of $\nu$; for simplicity say $\sigma_{{\sf master}, \mu} = (\sigma_1,\ldots,\sigma_r)$
  and $\sigma_{{\sf master}, \nu} = (\sigma_1,\ldots,\sigma_{r'})$ where $r'<r$ and $H = H_1\times\ldots\times H_r$.

  Note that $\sigma_{{\sf master}, \nu}$ is saturated on $H_{\leq r'} = H_1\times\ldots\times H_{r'}$. We define
  $\gamma_{{\sf master}, \nu}$ and $\phi_{{\sf master}, \nu}$ similarly and notice that they are also saturated on $H_{\leq r'}$.
  The distribution of $(\sigma_{{\sf master}, \nu}(x), \gamma_{{\sf master},\nu}(y), \phi_{{\sf master},\nu}(z))$
  where $(x,y,z)\sim \nu$ has full support on $\sett{(a,b,c)\in H_{\leq r'}^{3}}{a+b+c = 0}$.
\end{proof}
\begin{remark}
Another important feature of a distribution $\nu$ that results from a distribution $\mu$ by adding elements to its support, is that
the non-embedding noise operator of $\nu$ is ``weaker'' than that of $\mu$. Intuitively, this follows as the master embedding of $\mu$ is a refinement
of the master embedding of $\nu$, hence the operator of $\nu$ does more averaging. We will use this fact in the future (formalized appropriately)
to argue that if a function $g$ has small noise stability with respect to the non-embedding noise operator of $\mu$, then it also has small stability
with respect to the non-embedding noise operator of $\nu$.
\end{remark}

In Section~\ref{sec:motivating_all_are_H} we have seen that if each one of the alphabets $\Sigma$, $\Phi$ is equal to $H$,
then the proof Theorem~\ref{thm:nonembed_deg_must_be_small} is a rather simple, requiring a change of distribution and a basic Fourier
analytic computation. As explained in Section~\ref{sec:max_merg_mot} some of our arguments require either the size of $\Sigma$ or of
$\Phi$ to be strictly larger than the size of the group $H$, and thus we have to separately deal with the case that both have the same
size as $H$. This is a slightly more general case than the case handled in Section~\ref{sec:motivating_all_are_H}, and in the following lemma
we show that argument to the argument therein works:
\begin{lemma}\label{lem:very_base_case}
  Suppose that $\mu$ is a distribution as in Theorem~\ref{thm:nonembed_deg_must_be_small},
  that ${\sf NEStab}_{1-\delta}(g)\leq \delta$ and that $\card{\Sigma} = \card{\Phi} = \card{H}$. Then
  the conclusion of Theorem~\ref{thm:nonembed_deg_must_be_small} holds.
\end{lemma}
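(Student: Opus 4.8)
The plan is to essentially reproduce the argument from Section~\ref{sec:motivating_all_are_H} (and Lemma~\ref{lem:demonstrate_merge}), but without assuming that $\mu_{y,z}$ is uniform or that the image distribution of the master embedding is uniform; the only extra hypothesis available here is $\card{\Sigma} = \card{\Phi} = \card{H}$. First I would relabel: since $\sigma$ and $\phi$ are saturated master embeddings and their images have size exactly $\card{H}$, the maps $\sigma\colon\Sigma\to H$ and $\phi\colon\Phi\to H$ are bijections, so we may identify $\Sigma = \Phi = H$ with $\sigma,\phi$ being the identity. The support condition in item~4 of Theorem~\ref{thm:nonembed_deg_must_be_small} then says that $(x,y,z)\sim\mu$ has $x + \phi^{-1}$-image$\ldots$ — concretely, writing $b = \gamma(y)$, every atom $(x,y,z)$ in the support satisfies $x + b + z = 0$, i.e.\ $z = -x-b$ is determined by $x$ and $y$. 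So the distribution $\mu$ is supported on triples $(x,y,-x-\gamma(y))$, and for each fixed $y$ the conditional distribution of $x$ (equivalently of $z$) is some distribution on $H$ whose support is all of $H$ (by pairwise connectedness plus fullness of the image support), with each atom of mass $\Omega_{m,\alpha}(1)$.

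Next I would form the ``collapsed'' function $G(y) = \cExpect{(x',y',z')\sim\mu^{\otimes n}}{y'=y}{f(x')h(z')}$, so that $\Expect{\mu^{\otimes n}}{f(x)g(y)h(z)} = \inner{G}{\overline{g}}$ up to conjugation, and the whole point is to show $G$ is (close to) an embedding function in $y$, i.e.\ lies in ${\sf Embed}_{\gamma}(\mu^{\otimes n})$. The subtlety compared to Lemma~\ref{lem:demonstrate_merge} is that the conditional law of $(x,z)$ given $y$ is not uniform on the line $\{x+z = -\gamma(y)\}$, so expanding $f = \sum_\chi \widehat f(\chi)\chi$, $h = \sum_{\chi'}\widehat h(\chi')\chi'$ no longer makes the cross terms $\chi\neq\chi'$ vanish exactly. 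To handle this I would \emph{first} apply the random-restriction trick of Section~\ref{sec:random_restrictions} to replace $\mu$ by a distribution in which, conditioned on $y$, the pair $(x,z)$ is uniform on the appropriate line: write the conditional law of $x$ given $y$ as $\beta\,U_H + (1-\beta)\mathcal D_y$ where $U_H$ is uniform and $\beta = \beta(m,\alpha)>0$, restrict the coordinates not chosen (fixing them per $\mathcal D_y$), and observe that $3$-wise correlations and non-embedding stability only decrease in expectation (using the measure-changing restriction lemma and Claim~\ref{claim:rr_nestab}); after this reduction the conditional distribution of $(x,z)$ given $y$ is genuinely uniform on its line, and now $G(y) = \sum_{\chi\in\hat H^{\otimes n}}\widehat f(\chi)\widehat h(\chi)\,\chi(-\gamma(y))$ exactly as in Lemma~\ref{lem:demonstrate_merge}, which is manifestly in ${\sf Embed}_{\gamma}(\mu^{\otimes n})$.

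Once $G$ is an embedding function, the bound follows exactly as in Lemma~\ref{lem:demonstrate_merge}: using that $\mathrm{T}_{\text{non-embed},\sqrt{1-\delta}}$ is self-adjoint and acts as the identity on embedding functions,
\[
\card{\inner{G}{\overline g}}
= \card{\inner{G}{\mathrm{T}_{\text{non-embed},\sqrt{1-\delta}}\,\overline g}}
\leq \norm{G}_2 \norm{\mathrm{T}_{\text{non-embed},\sqrt{1-\delta}}\,\overline g}_2
\leq {\sf NEStab}_{1-\delta}(g;\mu_y^{\otimes n})^{1/2}
\leq \sqrt{\delta},
\]
where $\norm{G}_2\leq 1$ since $f,h$ are $1$-bounded and $\mu_{y,z}$-marginals behave well. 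Tracking the loss through the random restriction, one gets a bound of the form $M\delta^{\eta}$ for suitable $M,\eta$ depending only on $m,\alpha$ (the restriction step costs at most a fixed power of $\delta$ in the effective noise rate, which is absorbed into $\eta$), matching the conclusion of Theorem~\ref{thm:nonembed_deg_must_be_small}. I expect the main obstacle to be the bookkeeping in the restriction step: one must verify that after fixing a $(1-\beta)$-fraction of coordinates, the \emph{residual} distribution on the live coordinates is a product of the uniform-on-the-line distribution (so that the exact Fourier computation applies coordinatewise), while simultaneously controlling that the non-embedding stability of the restricted $g$ stays $\leq \delta^{\Omega(1)}$ via Claim~\ref{claim:rr_nestab}; this is routine but is the only place where the argument differs in substance from the already-proven Lemma~\ref{lem:demonstrate_merge}, and care is needed because the conditional law of $x$ given $y$ depends on $y$, so the line-uniformity must be arranged for every value of $y$ simultaneously.
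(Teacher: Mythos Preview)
Your proposal is correct and follows essentially the same approach as the paper. The paper also reduces via random restriction to a distribution where the Fourier argument of Lemma~\ref{lem:demonstrate_merge} applies verbatim, with the one cosmetic difference that the paper arranges for $\mu_{y,z}$ to be uniform over $\Gamma\times\Phi$ (writing $\mu=\alpha'\nu+(1-\alpha')\nu'$ with $\nu$ having uniform $(y,z)$-marginal) rather than directly arranging for $x\mid y$ to be uniform on $H$; since $z$ is determined by $(x,y)$, these are equivalent formulations of the same reduction, and the paper's phrasing sidesteps your worry about $\mathcal D_y$ depending on $y$ by working with a single global decomposition of $\mu$.
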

\begin{proof}
We first argue that is suffices to prove Theorem~\ref{thm:nonembed_deg_must_be_small} in our case under the additional assumption
that the distribution of $\mu_{y,z}$ is uniform, and then show a direct argument for that case.

\paragraph{Reduction to the case the master embeddings are uniform.}
Consider the distribution $\nu$ over the support of $\mu$ which is defined by first sampling
$(y,z)\in \Gamma\times \Phi$ uniformly $(x,y,z)\sim \mu$ conditioned on $y$ and $z$.
As the support of this distribution is the same as of $\mu$, we may write
$\mu = \alpha'\nu + (1-\alpha')\nu'$ for some $\alpha' = \alpha/2$ where $\nu'$ is some distribution. We now think of generating
$(x,y,z)\sim \mu^{\otimes n}$ as first choosing $I\subseteq_{\alpha'} [n]$,
sampling $(x',y',z')\sim \nu^{I}$, $(x'',y'',z'')\sim\nu'^{\overline{I}}$ and taking $x = (x',x'')$, $y = (y',y'')$, $z = (z',z'')$.
Then
\begin{equation}\label{eq:very_base_case_1}
\card{\Expect{(x,y,z)\sim\mu^{\otimes n}}{f(x)g(y)h(z)}}
\leq \Expect{I, (x'',y'',z'')}{\card{\Expect{(x',y',z')\sim\nu^{I}}{f_{\overline{I}\rightarrow x''}(x')g_{\overline{I}\rightarrow y''}(y')h_{\overline{I}\rightarrow z''}(z')}}}
\end{equation}
Note that in $\nu,\nu'$ the probability of each atom is at least $\alpha/2$, and we choose $c(\alpha/2,\alpha')$ from Lemma~\ref{lem:op_comparison_lemma}.
Let $c' = \alpha c/2$ and let $E_1$ be the event that ${\sf NEStab}_{1-c'^{-1}\delta}(g_{\overline{I}\rightarrow y''}; \nu_y)\geq \sqrt{\delta}$.
Then by Lemma~\ref{lem:op_comparison_lemma} we have that
\[
\Expect{I, y''}{{\sf NEStab}_{1-c'^{-1}\delta}(g_{\overline{I}\rightarrow y''}; \nu_y^{I})}
\leq
{\sf NEStab}_{1-\delta}(g; \mu_y^{\otimes n})\leq \delta,
\]
so by Markov's inequality it follows that $\Prob{}{E_1}\leq \sqrt{\delta}$. Thus, looking at~\eqref{eq:very_base_case_1} we get that
\begin{align*}
\card{\Expect{(x,y,z)\sim\mu^{\otimes n}}{f(x)g(y)h(z)}}
&\leq
\Expect{I, (x'',y'',z'')}{1_{E_1}\card{\Expect{(x',y',z')\sim\nu^{I}}{f_{\overline{I}\rightarrow x''}(x')g_{\overline{I}\rightarrow y''}(y')h_{\overline{I}\rightarrow z''}(z')}}}\\
&+
\Expect{I, (x'',y'',z'')}{1_{\overline{E_1}}\card{\Expect{(x',y',z')\sim\nu^{I}}{f_{\overline{I}\rightarrow x''}(x')g_{\overline{I}\rightarrow y''}(y')h_{\overline{I}\rightarrow z''}(z')}}}\\
&\leq \sqrt{\delta}+
\Expect{I, (x'',y'',z'')}{1_{\overline{E_1}}\card{\Expect{(x',y',z')\sim\nu^{I}}{f_{\overline{I}\rightarrow x''}(x')g_{\overline{I}\rightarrow y''}(y')h_{\overline{I}\rightarrow z''}(z')}}}.
\end{align*}
Whenever $\overline{E_1}$ holds, we may take $\delta' = \sqrt{\delta}$ and get that
${\sf NEStab}_{1-\delta'}(g_{\overline{I}\rightarrow y''}; \nu_y)\leq \delta'$ (provided that $\delta$ is small enough).
Moreover, in $\nu$ we now have the additional property that $\nu_{y,z}$ is uniform, hence the last expectation is at most
$M\delta'^{\eta}\leq M\delta^{\eta/2}$ where $\eta, M$ depend only on $\alpha$ and the alphabet sizes, hence we get that
$\card{\Expect{(x,y,z)\sim\mu^{\otimes n}}{f(x)g(y)h(z)}}\leq M'\delta^{\eta'}$ for $M' = 2M$ and $\eta' = \eta/2$.

\paragraph{The argument in the case the master embeddings are uniform.}
We now assume that $\mu_{y,z}$ is uniform.
we may re-label the symbols in $\Sigma$ to be group elements (and similarly $\Phi$), so that the master
embeddings of $x$ (and similarly $z$) will be the identity. We now note that conditioned on $y$,
the distribution over $x,z$ is uniform over all pairs such that $x+\gamma(y) + z = 0$.
Indeed, all elements $(x,y,z)\in {\sf supp}(\mu)$ satisfy that
$x + \gamma(y) + z = 0$ over $H$, and as $\mu_{y,z}$ is uniform we get that conditioned on $y$, $z$ is uniform
and the result follows.

Define $\tilde{g}(y) = \cExpect{(x',y',z')\sim \mu}{y'=y}{f(x')h(z')}$ and note that $\tilde{g}$
is in ${\sf Embed}_{\gamma}(\mu)$. Indeed, write $f = \sum\limits_{\chi\in\hat{H}^{\otimes n}}\widehat{f}(\chi)\chi(x)$ and
$h = \sum\limits_{\chi'\in\hat{H}^{\otimes n}}{\widehat{h}(\chi')\chi'(z)}$. Multiplying out, one observes that for $\chi, \chi'$ we have
\[
\cExpect{(x',y',z')\sim \mu}{y'=y}{\chi(x')\chi'(z')}
=\chi(-\gamma(y))\cExpect{(x',y',z')\sim \mu}{y'=y}{(\chi'\overline{\chi})(z')}
=\chi(-\gamma(y))1_{\chi'=\overline{\chi}},
\]
where we used the fact that conditioned on $y'$, the distribution over $(x',z')$ is uniform over $x'+\gamma(y') + z' = 0$.
Thus,
\[
\card{\Expect{(x,y,z)\sim \mu^{\otimes n}}{f(x)g(y)h(z)}}
=\inner{g}{\overline{\tilde{g}}}
=\inner{g}{\mathrm{T}_{\sqrt{1-\delta},\text{non-embed}}\overline{\tilde{g}}}
=\inner{\mathrm{T}_{\sqrt{1-\delta},\text{non-embed}} g}{\overline{g}},
\]
and by Cauchy-Schwarz this is at most $\norm{\mathrm{T}_{\sqrt{1-\delta},\text{non-embed}} g}_{2} = \sqrt{{\sf NEStab}_{1-\delta}(g)}\leq \sqrt{\delta}$.
\end{proof}

With Lemmas~\ref{lem:very_base_case},~\ref{lem:pushing_counter_to_extreme} in hand, it quickly follows that
it suffices to prove Theorem~\ref{thm:nonembed_deg_must_be_small} in the case that none of these results apply,
and hence it suffices to prove the following variant of Theorem~\ref{thm:nonembed_deg_must_be_small}:
\begin{thm}\label{thm:nonembed_deg_must_be_small_rephrase_maximal}
  For all $m\in\mathbb{N}$, $\alpha>0$ there are $M\in\mathbb{N}$, $\delta_0>0$ and $\eta>0$ such that the following holds for all $0< \delta\leq \delta_0$.
  Suppose that $\mu$ is a distribution over $\Sigma\times \Gamma\times \Phi$ satisfying:
  \begin{enumerate}
    \item The probability of each atom is at least $\alpha$.
    \item The size of each one of $\Sigma,\Gamma,\Phi$ is at most $m$.
    \item ${\sf supp}(\mu)$ is pairwise connected.
    \item There are master embeddings $\sigma,\gamma,\phi$ for $\mu$ into an Abelian group $(H,+)$ that are saturated,
    and the distribution of $(\sigma(x),\gamma(y),\phi(z))$ where $(x,y,z)\sim \mu$ has full support on $\{(a,b,c)\in H^3~|~a+b+c = 0\}$.
    \item The distribution $\mu$ is maximal as per Definition~\ref{def:maximal}.
    \item $\card{\Sigma}>\card{H}$ or $\card{\Phi}>\card{H}$.
    \item $\mu$ is fully merged.
  \end{enumerate}
  Then, if $f\colon\Sigma^n\to \mathbb{C}$, $g\colon\Gamma^n\to \mathbb{C}$ and $h\colon\Phi^n\to \mathbb{C}$ are $1$-bounded functions such
  that
  ${\sf NEStab}_{1-\delta}(g;\mu_y^{\otimes n})\leq \delta$
  then
  \[
  \card{\Expect{(x,y,z)\sim \mu^{\otimes n}}{f(x)g(y)h(z)}}\leq M\delta^{\eta}.
  \]
\end{thm}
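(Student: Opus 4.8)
The plan is to prove Theorem~\ref{thm:nonembed_deg_must_be_small_rephrase_maximal} by induction on the number of variables $n$, using the ingredients developed in the preceding sections: the relaxed base case (Statement~\ref{statement:relaxed_base_case}), the modest Markov chain and effective non-embedding degree, and the maximality/merging machinery. First I would set up the inductive scaffolding. Using Lemma~\ref{lem:demonstrate_maximality}, Lemma~\ref{lem:demonstrate_maximality2}, and the remarks in Section~\ref{sec:additional_remarks_max} about ``maximal distributions contained in $\mu$'', I would reduce to the situation where $f$, $g$, and $h$ all have almost all of their $\ell_2$-mass on monomials of large non-embedding degree (for $g$ this is the hypothesis; for $f$ and $h$ it is obtained by truncating via $I - \mathrm{E}_{\text{non-embed},1-\delta^{1-\xi}}^{\otimes n}$ and $I - \T_{\text{non-embed},1-\delta^{1-\xi}}^{\otimes n}$ respectively, at the cost of an additive $M\delta^{\eta}$ error). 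Then, following the homogenization discussion in the proof overview (Step 4), I would further reduce to the case that $f$, $g$, $h$ are embedding-homogeneous and non-embedding-homogeneous; this is a combinatorial bucketing argument that loses only a polynomial-in-$(1/\delta)$ factor in the number of buckets, which is absorbed into the final bound.

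Next I would carry out the inductive step on a single coordinate, say the $n$th. Write, as in the overview, $F = \sum_{t} \psi_t F_t(x_1,\dots,x_{n-1}) F_t'(x_n)$ with $\{F_t\}$, $\{F_t'\}$ orthonormal and $F_t'\in B_1\cup B_2$, and analogously $G = \sum_r \kappa_r G_r G_r'$, $H = \sum_s \rho_s H_s H_s'$. Expanding $\Expect{\nu^{\otimes n}}{FGH}$ as a sum over triples $(t,r,s)$ of $\Expect{\nu^{\otimes n-1}}{F_t G_r H_s}\Expect{\nu}{F_t' G_r' H_s'}$, the key structural observations are: (i) any term with $F_t'\in B_1$ but not all of $F_t', G_r', H_s'$ coming from the \emph{same} character $\chi$ of $H$ vanishes, using that a $B_1$-function factors as a function of $y$ times a function of $z$ and then independence; (ii) the terms with $F_t', G_r', H_s'$ all equal to the same character $\chi\circ\sigma_{\text{master}}$ preserve all degrees and contribute, after extracting $\chi$, an expression of the same shape on $n-1$ variables; (iii) the remaining terms involve at least one non-embedding univariate factor, so the \emph{relaxed} base case (Statement~\ref{statement:relaxed_base_case}) applies once we have arranged that the relevant non-embedding factors have variance $\geq\tau$ on $\Sigma_{\text{modest}}$ — which we can do by splitting off the effective-degree-zero part via $\mathrm{E}_{\text{non-embed}}$. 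The additive-base-case re-casting (treating the bundle of leftover terms as a single expectation $\Expect{\nu}{\tilde F (a\tilde G + b\tilde H)}$) is used to bound their total contribution by $(1-\Omega(1))$ times an $(n-1)$-variate correlation of effective non-embedding degree smaller by one. Combining (ii) and (iii) via the Cauchy–Schwarz bookkeeping from the overview yields: either $\card{\Expect{\nu^{\otimes n}}{FGH}} \leq \card{\Expect{\nu^{\otimes n-1}}{F' G' H'}}$ with the same effective degrees, or $\card{\Expect{\nu^{\otimes n}}{FGH}} \leq (1-\Omega(1))\card{\Expect{\nu^{\otimes n-1}}{\tilde F' \tilde G' \tilde H'}}$ with effective degree dropped by one; iterating gives $(1-\Omega(1))^{\Omega(1/\delta)}$, which is $\leq M\delta^{\eta}$ for suitable $M,\eta$.

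I would handle the split into the two cases discussed in the overview — ``$n$ much larger than the (effective) non-embedding degree'' versus ``near-linear non-embedding degree'' — as two instances of the same inductive template but with the base-case ingredient swapped: in the bulk case the additive base case suffices to control the error terms, whereas in the near-linear case one also needs the full relaxed base case (the ideal base case analog) to control the last, non-negligible term $\Expect{\nu}{F_2' G_2' H_2'}$. In both cases the quantitative dependence is tracked carefully so that the number of inductive steps producing a $(1-\Omega(1))$ gain is $\Omega(1/\delta)$ (coming from the effective non-embedding degree being $\Omega(1/\delta)$ after truncation), which is what forces the final bound $M\delta^{\eta}$ with $\eta = \eta(m,\alpha) > 0$.

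The main obstacle I expect is the Horn-SAT obstruction and its interaction with boundedness, i.e., making the relaxed base case genuinely usable inside the induction. Concretely: to apply Statement~\ref{statement:relaxed_base_case} we need the univariate non-embedding factors to have variance on a \emph{fixed} two-element set $\Sigma_{\text{modest}}$ on which the master embedding is constant, and we need the truncation operator $\mathrm{E}_{\text{non-embed}}$ (the modest Markov chain) to be a sub-averaging operator of $\T_{\text{non-embed}}$ so that truncating by effective degree does not reintroduce embedding functions or destroy the high-non-embedding-degree structure already obtained. Arranging all of this requires first transforming $\mu$ (via further path tricks and merges, as in Section~\ref{sec:base_case}) into a distribution $\tilde\mu$ for which Statement~\ref{statement:idea_relaxed_base_case} holds — that all Horn-SAT embeddings vanish on a set $\Sigma'$ of size $>\card{H}$ — and then choosing $\Sigma_{\text{modest}}\subseteq\Sigma'$ via pigeonhole inside a single $\sigma$-fiber. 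Verifying that this transformation preserves maximality, full mergedness, pairwise connectedness, the saturated-master-embedding property, \emph{and} the smallness of ${\sf NEStab}(g)$, while keeping all alphabets of bounded size and the $x$-alphabet a power of $\Sigma$, is the delicate technical heart; I would expect to spend the bulk of Sections~\ref{sec:base_case} through~\ref{sec:prove_near_lin} on precisely this, with the remaining inductive calculus being a careful but essentially mechanical elaboration of equation~\eqref{eq:intro_inductive}.
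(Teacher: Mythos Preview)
Your proposal is essentially the paper's approach: reduce via the $x\leftrightarrow z$ symmetry and then via path-tricks/merges/duplication to a distribution with the relaxed base case (Theorem~\ref{thm:nonembed_deg_must_be_small_rephrase_maximal_relaxed}), soft- and harsh-truncate to completely homogeneous $2$-norm-bounded pieces (Theorem~\ref{thm:nonembed_homogenous}), and run the two-regime induction on $n$ (Sections~\ref{sec:reduce_to_near_linear} and~\ref{sec:prove_near_lin}). One correction to flag: the transformation does \emph{not} preserve maximality of the distribution itself---this is exactly why condition 6b in Theorem~\ref{thm:nonembed_deg_must_be_small_rephrase_maximal_relaxed} only asks that $\nu$ \emph{contain} a maximal fully-merged distribution $\tilde\nu$, and the soft-truncation arguments (Claims~\ref{claim:soft_trunc1},~\ref{claim:soft_trunc2}) work by restricting into that sub-distribution rather than by invoking maximality of $\nu$ directly.
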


\subsubsection{Theorem~\ref{thm:nonembed_deg_must_be_small_rephrase_maximal} implies Theorem~\ref{thm:nonembed_deg_must_be_small}}\label{sec:rephrase_implies}
By Lemma~\ref{lem:from_mu_to_path}, the conclusion of Theorem~\ref{thm:nonembed_deg_must_be_small}
for $\mu$ would follow if it is true for the distribution $\mu_{2^t}$ after the path trick, where $t$ is
taken to be sufficiently large. It is clear that $\mu_{2t}$ is a distribution over $\Sigma'\times \Gamma\times \Phi$
where $\Sigma'\subseteq \Sigma^{2^t-1}$ and that for sufficiently large $t\geq t_0(\alpha,m)$ we have that $\mu_{2t}$
satisfies all of the conditions of Theorem~\ref{thm:nonembed_deg_must_be_small}
(the fact that the support of the marginal distribution over $y,z$ is full follows
from Lemma~\ref{lem:master_stays}).
By Lemma~\ref{lem:pushing_counter_to_extreme}, Theorem~\ref{thm:nonembed_deg_must_be_small} for $\mu_{2t}$
is equivalent to the same statement
under the additional assumptions of maximality and that the distribution is fully merged.
If $\card{\Sigma'} = \card{\Phi} = \card{H}$ then the validity of the statement follows from Lemma~\ref{lem:very_base_case}.
Otherwise, we also have that either $\card{\Sigma'}>\card{H}$ or $\card{\Phi}>\card{H}$ and the validity follows from Theorem~\ref{thm:nonembed_deg_must_be_small_rephrase_maximal}.
\qed

\subsection{Restating Theorem~\ref{thm:nonembed_deg_must_be_small_rephrase_maximal} via Role Symmetry of $x,z$}
Noting that the roles of $x$ and $z$ are symmetric in Theorem~\ref{thm:nonembed_deg_must_be_small_rephrase_maximal} we have that it is equivalent to
the following statement, in which the same condition has been replaced by $\card{\Sigma}>\card{H}$.
\begin{thm}\label{thm:nonembed_deg_must_be_small_rephrase_maximal2}
  For all $m\in\mathbb{N}$, $\alpha>0$ there are $M\in\mathbb{N}$, $\delta_0>0$ and $\eta>0$ such that the following holds for all $0<\delta\leq \delta_0$.
  Suppose that $\mu$ is a distribution over $\Sigma\times \Gamma\times \Phi$ satisfying:
  \begin{enumerate}
    \item The probability of each atom is at least $\alpha$.
    \item The size of each one of $\Sigma,\Gamma,\Phi$ is at most $m$.
    \item ${\sf supp}(\mu)$ is pairwise connected.
    \item There are master embeddings $\sigma,\gamma,\phi$ for $\mu$ into an Abelian group $(H,+)$ that are saturated,
    and the distribution of $(\sigma(x),\gamma(y),\phi(z))$ where $(x,y,z)\sim \mu$ has full support on $\{(a,b,c)\in H^3~|~a+b+c = 0\}$.
    \item The distribution $\mu$ is maximal as per Definition~\ref{def:maximal}.
    \item $\card{\Sigma}>\card{H}$.
    \item In $\mu$, the value of any two coordinates implies the value of the third.
  \end{enumerate}
  Then, if $f\colon\Sigma^n\to \mathbb{C}$, $g\colon\Gamma^n\to \mathbb{C}$ and $h\colon\Phi^n\to \mathbb{C}$ are $1$-bounded functions such
  that
  ${\sf NEStab}_{1-\delta}(g;\mu_y^{\otimes n})\leq \delta$
  then
  \[
  \card{\Expect{(x,y,z)\sim \mu^{\otimes n}}{f(x)g(y)h(z)}}\leq M\delta^{\eta}.
  \]
\end{thm}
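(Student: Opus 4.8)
## Proof plan for Theorem~\ref{thm:nonembed_deg_must_be_small_rephrase_maximal2}

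The plan is to prove this statement by induction on $n$, reducing the $3$-wise correlation over $n$ coordinates to a $3$-wise correlation over $n-1$ coordinates, gaining a multiplicative factor of $(1-\Omega(1))$ each time the non-embedding (or rather, effective non-embedding) degree gets decremented. Since ${\sf NEStab}_{1-\delta}(g)\leq \delta$ forces almost all the $\ell_2$-mass of $g$ onto monomials of non-embedding degree $\Omega(1/\delta)$, iterating such a step $\Omega(1/\delta)$ times would yield the bound $(1-\Omega(1))^{\Omega(1/\delta)}$, which is far stronger than $M\delta^\eta$. The first batch of work is preprocessing: using Lemma~\ref{lem:demonstrate_maximality} (in the form of Lemma~\ref{lem:demonstrate_maximality2}) together with the maximality and full-merge hypotheses, I would replace $f$ by $(I - \mathrm{E}_{\text{non-embed},1-\delta^{1-\xi}}^{\otimes n})f$ and $h$ by $(I - \mathrm{T}_{\text{non-embed},1-\delta^{1-\xi}}^{\otimes n})h$, incurring only an additive error of $M\delta^{\eta}$; this way all three functions have almost all their mass on high (effective) non-embedding degree monomials, and $f$ in particular has high \emph{effective} degree with respect to $\Sigma_{\sf modest}$. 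I also need to first arrange, via a change of distribution as in the reduction step of Lemma~\ref{lem:very_base_case} (random restriction onto a sub-distribution with $\mu_{y,z}$ uniform), and via further path tricks/merges as sketched in Section~\ref{sec:max_merg_mot}, that $\mu$ (or a distribution $\nu$ it ``contains'') also satisfies the relaxed base case Statement~\ref{statement:relaxed_base_case} on a set $\Sigma_{\sf modest}=\{x,x'\}$ with $\sigma(x)=\sigma(x')$; the existence of $\Sigma_{\sf modest}$ of the right kind is exactly where the hypothesis $\card{\Sigma}>\card{H}$ and the full-merge hypothesis are used (pigeonhole inside the ``non-vanishing set'' $\Sigma'$ of Horn-SAT embeddings).

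Next, after these preprocessing reductions (the bulk of which is deferred, as the paper says, to Sections~\ref{sec:reudce_to_homogenous} onward), I would assume without serious loss that $f$, $g$, $h$ are embedding-homogeneous and non-embedding-homogeneous, and carry out the core inductive step. Picking a coordinate, say the $n$-th, on which $f$ has almost all of its mass on monomials whose $x_n$-component is an embedding function, I would write
\begin{align*}
f(x) &= \psi_1 F_1(x_{-n})F_1'(x_n) + \psi_2 F_2(x_{-n})F_2'(x_n),\\
g(y) &= \kappa_1 G_1(y_{-n})G_1'(y_n) + \kappa_2 G_2(y_{-n})G_2'(y_n),\\
h(z) &= \rho_1 H_1(z_{-n})H_1'(z_n) + \rho_2 H_2(z_{-n})H_2'(z_n),
\end{align*}
where the index-$1$ single-coordinate functions are embedding functions and the index-$2$ ones are effective-degree-$1$, and $\psi_2,\kappa_2,\rho_2$ are small (controllable by the preprocessing). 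Expanding $\Expect{\nu^{\otimes n}}{fgh}$ into eight terms, the terms carrying an index-$1$ factor that is \emph{not} paired only with index-$1$ factors vanish by orthogonality of embedding functions to non-embedding functions (an embedding univariate function on $x$ factors as a function of $y$ times a function of $z$, so such a term becomes a product of expectations one of which is $0$). The surviving terms are $\psi_1\kappa_1\rho_1\Expect{\nu^{\otimes n-1}}{F_1 G_1 H_1}\Expect{\nu}{F_1'G_1'H_1'}$ plus the terms involving $F_2'$. For the latter I would use the relaxed base case and the ``additive base case'': after ensuring that $F_1',G_1',H_1'$ all come from the same character $\chi\in\hat H$ of the master embedding — so that $G_1'=\overline{F_1'}\,\overline{H_1'}$ etc. — I can re-cast the whole contribution of the $F_2'$-terms as an expectation $\Expect{\nu}{F_1'F_2'(aG_2+bH_2+\dots)}$ of the type handled by the relaxed/effective base case, yielding a gain of $(1-\lambda)$, and hence a bound of the form $(1-\Omega(1))\,\card{\Expect{\nu^{\otimes n-1}}{\tilde F_2\tilde G_2\tilde H_2}}$ where $\tilde F_2$ has effective non-embedding degree at least one less than $f$. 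This is precisely the content the introduction attributes to Section~\ref{sec:prove_near_lin} (near-linear case) and Section~\ref{sec:reduce_to_near_linear} (large-$n$ case), and I would split the argument into these two regimes depending on whether $n\gg {\sf nedeg}(f)$ or $n\asymp{\sf nedeg}(f)$; in the first regime the last of the eight terms is negligible and only the additive base case is needed, in the second it is not and the full relaxed base case is required.

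The main obstacle, as the paper itself flags, is the Horn-SAT obstruction: the ideal base case Statement~\ref{statement:idea_base_case} is simply false because $f$ may vanish, so one cannot take logarithms and directly conclude that near-extremizers of $\card{\Expect{\mu}{fgh}}$ are near-embedding functions. Handling this forces the entire machinery of $\Sigma_{\sf modest}$, effective non-embedding degree, the ``modest Markov chain'' $\mathrm{E}_{\text{non-embed},1-\delta}$, and the relaxed base case Statement~\ref{statement:relaxed_base_case}, along with the delicate requirement that $\Sigma_{\sf modest}$ be chosen inside a single fiber of the master embedding so that $\mathrm{E}_{\text{non-embed}}$ is a ``sub-operator'' of $\mathrm{T}_{\text{non-embed}}$. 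Guaranteeing, via a carefully chosen sequence of path tricks and merges, a distribution on which Statement~\ref{statement:idea_relaxed_base_case} holds while simultaneously preserving saturation of the master embedding, pairwise connectedness, maximality, full-mergedness, $\mu_{y,z}$ being (essentially) uniform, and $\card{\Sigma}>\card{H}$ — and then threading the ``alignment of master embeddings'' bookkeeping (as in Section~\ref{sec:additional_remarks_max}) so that a structural conclusion transfers back — is the technically heaviest part, and is where I expect essentially all the difficulty and length to concentrate. The $U_2$-type Fourier computation at the very end (Section~\ref{sec:the_hastad_argument}), by contrast, is routine, and the translation back from $F$ to $f$ via Lemma~\ref{lem:reverse_embed} and Mossel's theorem is comparatively light.
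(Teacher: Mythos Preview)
Your plan is essentially the paper's approach: reduce to a distribution $\nu$ on which the relaxed base case holds (the paper does this explicitly as ``Theorem~\ref{thm:nonembed_deg_must_be_small_rephrase_maximal_relaxed} implies Theorem~\ref{thm:nonembed_deg_must_be_small_rephrase_maximal2}'' in Section~\ref{sec:relaxed_implies}, via the five-step path-trick/merge/duplication construction of $\nu$), then run the SVD-based inductive argument in the two regimes you identify. One scope issue: your last paragraph about the $U_2$-type Fourier computation and the translation back from $F$ to $f$ via Lemma~\ref{lem:reverse_embed} belongs to the proof of Theorem~\ref{thm:main_stab_3_saturated} from Theorem~\ref{thm:nonembed_deg_must_be_small} (Sections~\ref{sec:the_hastad_argument}--\ref{sec:unraveling}), not to the proof of Theorem~\ref{thm:nonembed_deg_must_be_small_rephrase_maximal2} itself, so it should be dropped here. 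Also, the paper splits off a separate case $\card{\Sigma_{\sf modest}}=1$ (Lemma~\ref{lem:ugly_modest_case}), where the two candidate modest symbols get merged and one must appeal directly to maximality of $\mu$ rather than to the relaxed base case; your outline implicitly assumes $\card{\Sigma_{\sf modest}}=2$.
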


With Theorem~\ref{thm:nonembed_deg_must_be_small_rephrase_maximal2} in hand and following the logic of Section~\ref{sec:max_merg_mot},
we should now strive to achieve a relaxed base case for distributions as in Theorem~\ref{thm:nonembed_deg_must_be_small_rephrase_maximal2}.
This is the content of the next section.

\subsection{Establishing a Relaxed Base Case}
With the statement of Theorem~\ref{thm:nonembed_deg_must_be_small_rephrase_maximal2} in hand, we are now ready to begin addressing the so-called Horn-SAT
obstruction. The bulk of our proof will eventually be an inductive proof for a statement similar in spirit to Theorem~\ref{thm:nonembed_deg_must_be_small_rephrase_maximal2},
and to facilitate this induction we must have a base case. The goal of this section is to design such base case.

We begin by describing an ideal scenario, also discussed in Section~\ref{sec:max_merg_mot}, in which a simple base case
statement holds for the distribution $\mu$; if this could be achieved for all distributions $\mu$ as in Theorem~\ref{thm:nonembed_deg_must_be_small_rephrase_maximal2}
our argument would simplify considerably. Alas, there are distributions $\mu$ for which this ideal base case fails, and we call such obstructions as Horn-SAT embeddings.
After explaining what Horn-SAT embeddings are, we will turn to the question of how to overcome the Horn-SAT obstruction. To do so we will switch from the distribution $\mu$
to a related distribution $\mu'$ (using the path trick and merges), for which we have a relaxed form of this ideal base case (and is sufficient for our purposes).
\subsubsection{The Ideal Base Case}
Ideally, we would have liked to have a base case statement as follows.
For all $\tau>0$, there is $\lambda<1$ such that if $f\colon \Sigma\to\mathbb{C}$, $g\colon \Gamma\to\mathbb{C}$ and $h\colon \Phi\to\mathbb{C}$
satisfy that ${\sf NEStab}_{1-\xi}(g;\mu_y)\leq (1-\tau)\norm{g}_2^2$, then
\[
\card{\Expect{(x,y,z)\sim \mu}{f(x)g(y)h(z)}}\leq \lambda\norm{f}_2\norm{g}_2\norm{h}_2.
\]
There are several issues with this statement.
\begin{enumerate}
  \item First, at the present scenario it is not even clear that the statement should be true for $\lambda = 1$,
    let alone $\lambda<1$. The reason is that trivially, we can only bound the left hand side using H\"{o}lder's inequality by $\norm{f}_3\norm{g}_3\norm{h}_3$
    (or some other product of $q_1,q_2,q_3$ norms where $\frac{1}{q_1}+\frac{1}{q_2} + \frac{1}{q_3} = 1$), which may be much larger than
    $\norm{f}_2\norm{g}_2\norm{h}_2$. To address this issue, we will ensure that in $(x,y,z)\sim \mu$ we have that $y$ and $z$ are independent,
    in which case we may now use Cauchy-Schwarz to always argue that
    \[
    \card{\Expect{(x,y,z)\sim \mu}{f(x)g(y)h(z)}}
    \leq
    \sqrt{\Expect{(x,y,z)\sim \mu}{\card{f(x)}^2}}
    \sqrt{\Expect{(x,y,z)\sim \mu}{\card{g(y)}^2\card{h(z)}^2}}
    =\norm{f}_2\norm{g}_2\norm{h}_2,
    \]
    so we can at least take $\lambda = 1$ in the ideal base case.

  \item Second, even with the above transformation, it turns out that one still may not be able to take $\lambda<1$.
  Naively, one is tempted argue that it is possible to take some $\lambda<1$ by considering the equality cases of Cauchy-Schwarz and argue as follows:
  if we cannot take $\lambda < 1$, then it means that there are functions $f,g,h$ of $2$-norm $1$ for which the above Cauchy-Schwarz is tight,
  hence $f(x) = g(y) h(z)$ in the support of $\mu$, and such $f,g,h$ constitute an Abelian embedding, so $g\in {\sf Embed}_{\gamma}(\mu)$ and in particular
  ${\sf NEStab}_{1-\xi}(g;\mu_y) = \norm{g}_2^2$.

  Taking a closer inspection though reveals that there are issues if $f$ sometimes gets the value $0$; indeed, if $f$ is always non-zero this
  argument goes through, but we do not know how to ensure that. This issue was referred to as the ``Horn-SAT Obstruction'' in~\cite{BKMcsp2},
  and here too we have to circumvent it.
\end{enumerate}

\subsubsection{The Relaxed Base Case}
As the ideal base case may fail, we resort to a more relaxed form of it, referred to as the ``relaxed base case''; below is a formal definition.
\begin{definition}\label{def:relaxed_base}
  Let $\Sigma,\Gamma$ and $\Phi$ be finite alphabets of size at most $m$, let $\mu$ be a distribution over $\Sigma\times \Gamma\times \Phi$ in
  which the probability of each atom is at least $\alpha$, and let
  $(\sigma,\gamma,\phi)$ be master embeddings of $\mu$ into an Abelian group $(H,+)$. We say that $\mu$ satisfies the relaxed base
  case if there is $\Sigma_{{\sf modest}}\subseteq \Sigma$ of size at least $2$ such that the following holds:
  \begin{enumerate}
    \item For all $x,x'\in\Sigma_{{\sf modest}}$ it holds that $\sigma(x) = \sigma(x')$.
    \item There are $c(\alpha,m),C(\alpha,m)>0$ such that the following holds. For all $\tau>0$, for all
    $f\colon \Sigma\to\mathbb{C}$, $g\colon\Gamma\to\mathbb{C}$ and $h\colon \Phi\to\mathbb{C}$ such
    that $\Expect{x,x'\in\Sigma_{{\sf modest}}}{\card{f(x) - f(x')}^2}\geq \tau$ it holds that
    \[
    \card{\Expect{(x,y,z)\sim \mu^{\otimes n}}{f(x)g(y)h(z)}}\leq (1-c\tau^{C})\norm{f}_2\norm{g}_2\norm{h}_2.
    \]
  \end{enumerate}
\end{definition}

We now state a variant of Theorem~\ref{thm:nonembed_deg_must_be_small_rephrase_maximal} for distributions that have the relaxed base case,
and then show it implies Theorem~\ref{thm:nonembed_deg_must_be_small_rephrase_maximal}. In comparison to Theorem~\ref{thm:nonembed_deg_must_be_small_rephrase_maximal2},
below we have the additional assumption that the distribution of $(\sigma(x),\gamma(y),\phi(z))$ is uniform over elements $(a,b,c)\in H^3$ such that $a+b+c = 0$,
that $\mu_{y,z}$ is uniform and that the relaxed base case holds.
\begin{thm}\label{thm:nonembed_deg_must_be_small_rephrase_maximal_relaxed}
  For all $m\in\mathbb{N}$, $\alpha>0$ there are $K, M\in\mathbb{N}$, $\delta_0>0$ and $\eta>0$ such that the following holds for all $0<\delta\leq \delta_0$.
  Suppose that $\nu$ is a distribution over $\Sigma\times \Gamma\times \Phi$ satisfying:
  \begin{enumerate}
    \item The probability of each atom is at least $\alpha$.
    \item The size of each one of $\Sigma,\Gamma,\Phi$ is at most $m$.
    \item ${\sf supp}(\nu)$ is pairwise connected.
    \item There are master embeddings $\sigma,\gamma,\phi$ for $\mu$ into an Abelian group $(H,+)$ that are saturated,
    and the distribution of $(\sigma(x),\gamma(y),\phi(z))$ where $(x,y,z)\sim \nu$ is uniform over $\{(a,b,c)\in H^3~|~a+b+c = 0\}$.
    \item The support of $\nu$ on $\Gamma\times \Phi$ is full, and the marginal distribution of $\nu$ on $y,z$ is uniform.
    \item There are $\Sigma_{{\sf modest}}\subseteq \Sigma'\subseteq \Sigma$, $\Gamma'\subseteq \Gamma$ and $\Phi'\subseteq \Phi$
    such that
    \begin{enumerate}
      \item \textbf{Relaxed base case:} the distribution $\nu$ satisfies the relaxed base case with $\Sigma_{{\sf modest}}$.
      \item \textbf{Containing a maximal, fully merged distribution:}
      There is an alphabet $\Sigma''$ of size at most $K$,
      a map $a\colon \Sigma''\to \Sigma'$
      and a distribution $\tilde{\nu}$ supported on $\Sigma''\times \Gamma'\times \Phi'$ such that:
      \begin{enumerate}
        \item $a$ is surjective.
        \item $\sett{(a(x),y,z)}{(x,y,z)\in {\sf supp}(\tilde{\nu})} \subseteq {\sf supp}(\nu)$.
        \item $\tilde{\nu}$ is maximal as per Definition~\ref{def:maximal}.
        \item In $\tilde{\nu}$, the value of any two coordinates implies the third.
        \item $(\sigma\circ a, \gamma, \phi)$ is a master embedding of $\tilde{\nu}$, and it is saturated.
        \item There are distinct $v,v'\in \Gamma'$ such that $\gamma(v) = \gamma(v')$.
      \end{enumerate}
      \item \textbf{Full support on restriction of the first two coordinates:} $\Sigma'\times \Gamma\subseteq {\sf supp}(\nu_{x,y})$. In
      words, for all $x\in \Sigma'$ and $y\in \Gamma$ there is $z\in \Phi$ such that $(x,y,z)\in {\sf supp}(\nu)$.
    \end{enumerate}
    \item In $\nu$, the value of $y,z$ implies the value of $x$.
  \end{enumerate}
  Then, if $f\colon\Sigma^n\to \mathbb{C}$, $g\colon\Gamma^n\to \mathbb{C}$ and $h\colon\Phi^n\to \mathbb{C}$ are $1$-bounded functions such
  that
  ${\sf NEStab}_{1-\delta}(g;\nu_y^{\otimes n})\leq \delta$
  then
  \[
  \card{\Expect{(x,y,z)\sim \nu^{\otimes n}}{f(x)g(y)h(z)}}\leq M\delta^{\eta}.
  \]
\end{thm}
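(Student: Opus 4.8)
The plan is to prove Theorem~\ref{thm:nonembed_deg_must_be_small_rephrase_maximal_relaxed} by induction on the number of coordinates $n$, where the quantity being tracked through the induction is the \emph{effective non-embedding degree} of the triple $(f,g,h)$. Each inductive step will either strictly decrease $n$ while keeping the effective degree fixed, or decrease both $n$ and the effective degree by one at the cost of a multiplicative factor $1-\Omega_{m,\alpha}(1)$. Since the hypothesis ${\sf NEStab}_{1-\delta}(g;\nu_y^{\otimes n})\leq\delta$ forces $g$ to have almost all of its $L_2$-mass on monomials of non-embedding degree $\Omega(1/\delta)$, iterating the step $\Omega(1/\delta)$ times yields a bound of the form $(1-\Omega_{m,\alpha}(1))^{\Omega(1/\delta)}+o(1)$, which is at most $M\delta^{\eta}$ for suitable $M,\eta$ depending only on $m,\alpha$ once $\delta$ is small enough.

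Before the induction can be run I would perform several reductions. First, using the relaxed base case subset $\Sigma_{\sf modest}$, set up a refined orthonormal basis of $L_2(\Sigma;\nu_x)$ consisting of (i)~embedding functions, (ii)~non-embedding functions constant on $\Sigma_{\sf modest}$, and (iii)~non-embedding functions with variance on $\Sigma_{\sf modest}$, and define the modest Markov chain together with its averaging operator $\mathrm{E}_{\text{non-embed},1-\delta}$; the effective non-embedding degree of a monomial counts its type-(iii) components. Next, replace $f$ by $(I-\mathrm{E}_{\text{non-embed},1-\delta^{1-\xi}}^{\otimes n})f$ and $h$ by $(I-\T_{\text{non-embed},1-\delta^{1-\xi}}^{\otimes n})h$, so that all three functions carry almost all of their mass on monomials of large effective degree; the error incurred is controlled exactly as in Lemma~\ref{lem:demonstrate_maximality2}, except that since $\nu$ itself is not maximal we must route the argument through the maximal, fully merged sub-distribution $\tilde\nu$ supplied by hypothesis~6(b), lifting restricted functions over the support of $\tilde\nu$ back to functions over $\tilde\nu$'s domain via the containment and alignment-of-master-embeddings data as in Section~\ref{sec:additional_remarks_max}, and using Claim~\ref{claim:rr_nestab} to see that the noise-stability bound on $g$ survives the accompanying random restriction. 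A further reduction, by dyadic pigeonholing on the embedding-degree and effective-degree profiles (keeping boundedness via the noise operators), lets us assume $f,g,h$ are both embedding-homogeneous and effective-homogeneous.

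With these reductions in place, the inductive step decomposes $f,g,h$ along the last coordinate, writing e.g. $f=\sum_{t}\psi_t\, F_t(x_1,\ldots,x_{n-1})F_t'(x_n)$ with $\{F_t\},\{F_t'\}$ orthonormal, where $F_t$ has the same or one-smaller effective degree according to the type of the single-coordinate function $F_t'$. Expanding $\Expect{\nu^{\otimes n}}{fgh}$ over these decompositions produces a main term and cross terms, and there are two regimes. When $n$ is much larger than the effective degree, effective-homogeneity forces almost all of the mass of $f$ onto monomials whose $x_n$-component is an embedding function; here the key device is the \emph{additive base case} (the univariate statement $\card{\Expect{\nu}{F'(G'+H')}}\leq(1-\Omega(1))\norm{F'}_2\norm{G'+H'}$ whenever $F'$ is a non-embedding function), which lets one recast the entire contribution of the cross terms not involving an embedding last-coordinate component of $f$ as a single expectation of that additive form — after arranging, using the relaxed base-case data, that the relevant embedding components of $f,g,h$ come from a common character — yielding a clean "reduce $n$ by one" step. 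When the effective degree is comparable to $n$ (the near-linear regime) this repackaging fails, and one must instead invoke the relaxed base case of Definition~\ref{def:relaxed_base} directly to bound the term $\Expect{\nu}{F_2'G_2'H_2'}$ and, by a more delicate rearrangement, the remaining cross terms, so as to obtain $\card{\Expect{\nu^{\otimes n}}{fgh}}\leq(1-\Omega(1))\card{\Expect{\nu^{\otimes n-1}}{\tilde f\tilde g\tilde h}}$ for functions of effective degree one smaller.

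I expect the near-linear regime to be the main obstacle: unlike in the large-$n$ regime, there is no way to repackage the off-diagonal contributions as one instance of a single base-case inequality, so one must run an intricate accounting (following and extending the argument of~\cite{BKMcsp2}) that simultaneously tracks the $2$-norms and effective degrees of all the pieces $F_t,G_r,H_s$ and uses the relaxed base case quantitatively — with the polynomial dependence $1-c\tau^C$ between the variance parameter $\tau$ and the gain, which is ultimately what fixes the exponent $\eta$. A secondary but genuine difficulty is that none of the auxiliary distributions produced along the way (merged/path-tricked versions, and the conditioned distributions used in the maximality reductions) is literally $\nu$, so at each step one must verify that the full hypothesis list of Theorem~\ref{thm:nonembed_deg_must_be_small_rephrase_maximal_relaxed} — in particular the relaxed base case, maximality of the embedded $\tilde\nu$, and uniformity of the $y,z$ marginal — is preserved, and that the non-embedding and effective noise operators only get weaker so that the bound on ${\sf NEStab}(g)$ is not lost.
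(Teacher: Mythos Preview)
Your overall architecture matches the paper's: reduce to homogeneous functions via soft truncations routed through the maximal sub-distribution $\tilde\nu$ (this is exactly what the paper does in the appendix proving Claim~\ref{claim:beta_thm_implies}), then run a two-regime induction on $n$ where the large-$n$ step relies on the additive base case and the near-linear step on the relaxed base case. The large-$n$ regime is essentially as you describe (the paper formalizes it via the parameter $\beta_{n,d,d'}$ and Lemma~\ref{lem:reduce_beta}).

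The gap is in your description of the near-linear regime. You propose to ``invoke the relaxed base case directly to bound the term $\Expect{\nu}{F_2'G_2'H_2'}$'' and then handle cross terms by ``a more delicate rearrangement.'' This does not work: the relaxed base case only gives a gain when the univariate $x$-function has variance on $\Sigma_{\sf modest}$, and there is no a priori reason any individual $F_t'$ does. The paper's argument (Section~\ref{sec:prove_near_lin}) is indirect and quite different in spirit. One picks the distinguished coordinate $j$ to have \emph{large} modest influence (possible precisely because $n\lesssim d'$), then assumes for contradiction that $\delta_{n,d'}\approx\delta_{n-1,d'-1}$. Near-tightness of the naive Cauchy--Schwarz bound forces structural constraints on the coefficient vectors $V_{r,s}=(\widehat{F_t}(r,s))_t$; a \emph{random perturbation argument} (introducing random scalars $\pi_r,\theta_s$ and comparing upper and lower bounds on the variance of the resulting polynomial) together with a dimensionality fact (Fact~\ref{fact:dimensionality}: $\ell$ unit vectors in $\mathbb{C}^{\ell-1}$ cannot be near-orthogonal) then shows that each $F_t'$ is close to a linear combination of products $g_r'h_s'$. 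Only at this final step does the relaxed base case enter: it says each such product, and hence each $F_t'$, has \emph{small} variance on $\Sigma_{\sf modest}$---contradicting the choice of $j$. So the relaxed base case is not applied term-by-term but is used once, at the end, to derive a contradiction with the modest-influence lower bound. Your sketch does not contain this mechanism, and I do not see how a direct accounting of the type you describe would go through.
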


At first reading, we encourage the reader to think of condition 6b above as saying that $\mu$ itself is maximal and fully merged.
We do not know how to ensure that however, at least without losing some other properties of $\mu$ which are necessary for us (or without introducing
further complications). Instead, as explained in Section~\ref{sec:additional_remarks_max}, we just say that within $\mu$ we can find a distribution
$\nu$ which satisfies these additional properties; for our purposes, this is just as good as the distribution $\mu$ itself having these properties.

\subsection{Achieving the Relaxed Base Case: Theorem~\ref{thm:nonembed_deg_must_be_small_rephrase_maximal_relaxed} implies Theorem~\ref{thm:nonembed_deg_must_be_small_rephrase_maximal2}}
\label{sec:relaxed_implies}
In this section we show that Theorem~\ref{thm:nonembed_deg_must_be_small_rephrase_maximal_relaxed} implies Theorem~\ref{thm:nonembed_deg_must_be_small_rephrase_maximal2}.
For that, we start with a distribution $\mu$ as in Theorem~\ref{thm:nonembed_deg_must_be_small_rephrase_maximal2} and construct from it a distribution $\nu$ as in
Theorem~\ref{thm:nonembed_deg_must_be_small_rephrase_maximal_relaxed}, such that correlations with respect to $\mu$ can be upper bounded by a similar looking
correlations over $\nu$. We begin by noting that we may assume that in the notation of Theorem~\ref{thm:nonembed_deg_must_be_small_rephrase_maximal2},
there are distinct $v,v'\in \Gamma$ such that $\gamma(v) = \gamma(v')$. Indeed, otherwise the statement holds vacuously, as then
$\norm{g}_2^2 = {\sf NEStab}_{1-\delta}(g; \mu_y^{\otimes n})\leq \delta$ and the expectation would clearly be bounded by above by $\norm{g}_2\leq \sqrt{\delta}$.
\subsubsection{The Construction of $\nu$, $\Sigma_{{\sf modest}}$, $\Sigma'$ and $\Phi'$}
Take $\ell_1$ large enough with respect to $m$, and then $\ell_2$ large enough with respect to $\ell_1, m$.
\begin{enumerate}
  \item Take $\mu'$ to be the path trick distribution with respect to $z$ on $\mu$ for length $\ell_1$, and choose $\ell_1$ large
  enough so that the support of $\mu'$ on $x,y$ is full (using Lemma~\ref{lem:path_keeps_connectedness}). Then $\mu'$ is a distribution
  over $\Sigma\times \Gamma\times \tilde{\Phi}$ where $\tilde{\Phi}\subseteq \Phi^{\ell_1}$.

  \item Take $\mu''$ to be the path trick distribution with respect to
  $x$ on $\mu'$ for length $\ell_2$, and choose $\ell_2$ to be large enough so that $\mu''$ has full support on $y,z$ (again, using Lemma~\ref{lem:path_keeps_connectedness}).
  Then $\mu''$ is a distribution over $\tilde{\Sigma}\times \Gamma\times \tilde{\Phi}$ where $\tilde{\Sigma}\subseteq \Sigma^{\ell_2}$.

  \item Take $\nu'$ to be the $x$-merge of $\mu''$, so that it is a distribution over $\tilde{\Sigma}_{{\sf fin}}\times \Gamma\times \tilde{\Phi}$
  where $\tilde{\Sigma}_{{\sf fin}}\subseteq \tilde{\Sigma}$. For a symbol $x\in \tilde{\Sigma}$, we denote by $a(x)\in \tilde{\Sigma}_{\sf fin}$ the distinguished element
  from the connected component of $x$ as per Definition~\ref{def:merge}.

  \item Duplicate symbols: let $\sigma$, $\gamma$ and $\phi$ be a saturated master embedding for $\nu'$ into an Abelian group $H$.
  For each $a\in H$, let $d_3(a)$ be the number of $z\in \tilde{\Phi}$ such that $\phi(z) = a$ (and note that $d_3(a)\geq 1$ for all $a\in H$ as $\phi$ is saturated).
  Similarly, let $d_2(a)$ be the number of $y\in \Gamma$ such that $\gamma(y) = a$.

  We are going to duplicate each symbol $y\in \Gamma$, and $z\in \tilde{\Phi}$ symbol multiple times. That is, define
  \[
  \Gamma_{{\sf dup}} = \sett{(y,i)}{y\in \Gamma, 1\leq i\leq \frac{\prod\limits_{t\in H}d_2(t)}{d_2(\gamma(y))}},
  \qquad
  \tilde{\Phi}_{{\sf dup}} = \sett{(z,j)}{z\in \tilde{\Phi}, 1\leq j\leq \frac{\prod\limits_{t\in H}d_3(t)}{d_3(\phi(z))}}.
  \]

  \item Take $\nu$ to be the following distribution: sample $(b,c)\in H^2$ uniformly,
  sample $y\in \Gamma$ and $z\in \tilde{\Phi}$ uniformly such that $\gamma(y) = b$ and $\phi(z) = c$,
  sample $1\leq i\leq \frac{\prod\limits_{t\in H}d_2(t)}{d_2(b)}$ and $1\leq j\leq \frac{\prod\limits_{t\in H}d_3(t)}{d_3(c)}$ uniformly
  and then sample $(x',y',z')\sim \nu'$ conditioned on
  $y' = y$ and $z' = z$. Output $(x',(y,i),(z,j))$.
\end{enumerate}
We take $\Sigma' = \sett{a(x,\ldots,x)}{x\in \Sigma}$, $\Gamma' = \sett{(y,1)}{y\in \Gamma}$ and $\Phi' = \sett{((z,\ldots,z),1)}{z\in \Phi}$.
As $\card{\Sigma}>\card{H}$, by the pigeonhole principle there are distinct $x^{\star},{x^{\star}}'\in \Sigma$ such that $\sigma(x^{\star}) = \sigma({x^{\star}}')$,
and we take
\[
\Sigma_{{\sf modest}} = \set{a(x^{\star},\ldots,x^{\star}), a({x^{\star}}',\ldots,{x^{\star}}')}.
\]
There are two cases, depending on the size of
$\Sigma_{{\sf modest}}$:
\begin{enumerate}
  \item If $\card{\Sigma_{{\sf modest}}} = 1$, we show that the validity of Theorem~\ref{thm:nonembed_deg_must_be_small_rephrase_maximal} for
  $\mu$ follows from its maximality. The idea here is that if $\Sigma_{{\sf modest}}$ has size $1$, it means that the symbols $x$ and $x'$
  get ``mixed up'' when we look at the distribution $\nu$; however $\nu$ contains inside it a copy of $\mu$, hence we are able to relate
  correlations over $\nu$ to correlations over distribution $\mu$ of random restrictions, and this mix-up means that we will actually look
  at a distribution richer than $\mu$.
  \item If $\card{\Sigma_{{\sf modest}}} = 2$, we show that the relaxed base case for $\nu$ holds, and then Theorem~\ref{thm:nonembed_deg_must_be_small_rephrase_maximal_relaxed}
  quickly implies Theorem~\ref{thm:nonembed_deg_must_be_small_rephrase_maximal2}.
\end{enumerate}

The two cases are addressed int the following two lemmas:
\begin{lemma}\label{lem:ugly_modest_case}
  If $\card{\Sigma_{{\sf modest}}} = 1$, then $\mu$ satisfies the conclusion of
  Theorem~\ref{thm:nonembed_deg_must_be_small_rephrase_maximal2}.
\end{lemma}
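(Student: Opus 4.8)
The plan is to exploit the hypothesis $\card{\Sigma_{{\sf modest}}}=1$, which says that the two distinct symbols $x^\star,{x^\star}'\in\Sigma$ with $\sigma(x^\star)=\sigma({x^\star}')$ satisfy $a(x^\star,\ldots,x^\star)=a({x^\star}',\ldots,{x^\star}')$; i.e.\ the constant strings $(x^\star,\ldots,x^\star)$ and $({x^\star}',\ldots,{x^\star}')$ lie in the same connected component of the $x$-merge graph $G_{x,\mu'',{\sf merge}}$. I would first unwind what this means concretely: there is a path in that graph, hence a chain of symbols in $\tilde\Sigma\subseteq\Sigma^{\ell_2}$ with adjacent members sharing a common $(y,z)$ in ${\sf supp}(\mu'')$; tracing this back through the path trick on $x$ and the earlier path trick on $z$, one extracts from it an atom that must be present in ${\sf supp}(\mu)$ beyond what merging/maximality of $\mu$ would allow. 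The cleanest way to phrase the conclusion is: the distribution $\nu'$ (the $x$-merge of $\mu''$), when we restrict to the copies of the original symbols and pull back via $a$ and the constant-string identifications, ``contains'' a distribution on $\Sigma\times\Gamma\times\Phi$ whose support strictly contains ${\sf supp}(\mu)$ and which still satisfies all the hypotheses of Theorem~\ref{thm:nonembed_deg_must_be_small}. Then the maximality of $\mu$ (hypothesis in Theorem~\ref{thm:nonembed_deg_must_be_small_rephrase_maximal2}) applies to that richer distribution and gives the desired bound.

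Concretely, the steps I would carry out are: (1) From $\card{\Sigma_{{\sf modest}}}=1$, produce an explicit element $(x_0,y_0,z_0)\in\Sigma\times\Gamma\times\Phi$ with $(x_0,y_0,z_0)\notin{\sf supp}(\mu)$ but such that the distribution $\mu^+$ obtained by adding this atom to ${\sf supp}(\mu)$ (with a small weight, renormalizing) still satisfies conditions 1--7 of Theorem~\ref{thm:nonembed_deg_must_be_small_rephrase_maximal2}; here one uses that adding atoms can only coarsen the master embedding, so pairwise connectedness and the full-support/saturation conditions on the (possibly smaller) master embedding persist exactly as in the proof of Lemma~\ref{lem:pushing_counter_to_extreme}, and that $\card{\Sigma}>\card{H}$ is unaffected. (2) Invoke maximality of $\mu$: since ${\sf supp}(\mu)\subsetneq{\sf supp}(\mu^+)$, the conclusion of Theorem~\ref{thm:nonembed_deg_must_be_small} holds for $\mu^+$. (3) Finally, transfer this back to $\mu$: using the standard random-restriction device for changing the underlying measure (write $\mu^+=\beta\mu + (1-\beta)\mu'''$ — wait, that is backwards; rather write the weighting so that $\mu$ is a component, $\mu^+ = \beta\,\mu^{+}{}'' + (1-\beta)\mu$ with $\mu^+{}''$ some distribution on ${\sf supp}(\mu^+)$, all atoms $\Omega(1)$), and observe that a correlation over $\mu^{\otimes n}$ is bounded by the average over restrictions of a correlation over $(\mu^+{}'')^{\otimes I}$-type expectations; since $g$ restricted still has small ${\sf NEStab}$ by Claim~\ref{claim:rr_nestab}, the bound from step (2) applies on the inside and gives $M\delta^\eta$ overall. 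The bookkeeping here is entirely parallel to Lemma~\ref{lem:demonstrate_maximality}.

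The main obstacle I expect is step (1): carefully identifying the new atom $(x_0,y_0,z_0)$ and verifying it is genuinely \emph{not} in ${\sf supp}(\mu)$. The subtlety is that $\mu$ is already fully merged and maximal, so one must use the path-trick structure — the merge graph on $\tilde\Sigma=\Sigma^{\ell_2}$ being richer than the merge structure on $\Sigma$ — to locate a ``collision'' that projects down to a non-edge of $\mu$. The identification $a(x^\star,\ldots,x^\star)=a({x^\star}',\ldots,{x^\star}')$ is exactly such a collision: the constant strings are mergeable in $\mu''$ but, were the corresponding projected atom present in $\mu$, the pair $x^\star,{x^\star}'$ would already be mergeable in $\mu$ (contradicting full-mergedness unless $x^\star={x^\star}'$, which is false). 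Making this last implication precise — tracking a shared $(y,z)$ for two constant strings back through two successive path tricks down to a shared $(y,z)$-witness for $x^\star,{x^\star}'$ in $\mu$ — is the real content; everything else is routine and mirrors arguments already in the section. I would organize it by first proving a small auxiliary claim: if $(x,\ldots,x)$ and $(x',\ldots,x')$ share a common neighbour in $G_{x,\mu'',{\sf merge}}$, then either $x,x'$ are mergeable in $\mu$ or $(x,y,z)\in{\sf supp}(\mu)$ and $(x',y,z)\notin{\sf supp}(\mu)$ (or vice versa) for some $y,z$; then chain it along the path witnessing $\card{\Sigma_{{\sf modest}}}=1$.
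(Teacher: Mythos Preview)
Your identification of the key combinatorial fact is right: since $a(x^\star,\ldots,x^\star)=a({x^\star}',\ldots,{x^\star}')$ and $\mu$ is fully merged, one can produce a point $(x^\star,y',z')\notin{\sf supp}(\mu)$ such that $({x^\star}',y',z')\in{\sf supp}(\mu)$. The paper uses exactly this observation. But your step (3) is a genuine gap, and it is not a bookkeeping issue.

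Random restrictions let you pass from a distribution to a \emph{convex component} of it: if $\mu=\beta\mathcal{D}+(1-\beta)\mathcal{D}'$, then a $\mu^{\otimes n}$-expectation equals an average of $\mathcal{D}^{\otimes I}$-expectations of restrictions. This only goes one way. You cannot write $\mu$ as a mixture one of whose components has \emph{strictly larger} support than $\mu$, so there is no way to bound a $\mu$-correlation by $\mu^+$-correlations via restriction. Your own parenthetical ``wait, that is backwards'' is correct, and the attempted fix $\mu^+=\beta\mu^{+}{}''+(1-\beta)\mu$ decomposes $\mu^+$, not $\mu$; it gives you nothing about $\mu$-expectations. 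Lemma~\ref{lem:demonstrate_maximality} is not a parallel: there the expectation being bounded already \emph{is} an expectation over a richer distribution (because of the $\mathrm{T}_{\text{non-embed}}$ operator on $f$), so the restriction goes the right way.

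The paper resolves this by a different route that you are missing entirely. It first passes from $\mu$ to $\nu'$ (the $x$-merge of $\mu''$) via the path-trick Cauchy--Schwarz bound (Lemma~\ref{lem:from_mu_to_path}) and the merge lemma (Lemma~\ref{lem:merge}); these are genuine reductions, not restrictions, and they are what let you reach a distribution over a larger alphabet. Then inside $\nu'$ one restricts \emph{down} to the sub-distribution $\tilde\nu$ supported on the ``constant-string'' copies $\Sigma'\times\Gamma\times\Phi'_{\sf pre}$ (here the restriction goes the right way, since $\tilde\nu$ is a component of $\nu'$). Finally one \emph{re-interprets} a $\tilde\nu$-expectation as a $\mathcal{D}$-expectation over the original alphabet $\Sigma\times\Gamma\times\Phi$, where $\mathcal{D}$ samples $(X,y,Z)\sim\tilde\nu$ and then a uniform preimage $x\in a^{-1}(X)$; the collapse $a(x^\star,\ldots,x^\star)=a({x^\star}',\ldots,{x^\star}')$ is exactly what makes ${\sf supp}(\mathcal{D})\supsetneq{\sf supp}(\mu)$. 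Only now does maximality apply. So the path trick and merge are not avoidable here; they are the mechanism that converts the $\Sigma_{\sf modest}$-collapse into an expectation over a strictly richer distribution on the \emph{original} alphabet.
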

\begin{proof}
  Deferred to Section~\ref{sec:ugly_modest_case}.
\end{proof}
\begin{lemma}\label{lem:nu_satisfies}
  If $\card{\Sigma_{{\sf modest}}}>1$, then
  the distribution $\nu$ satisfies the relaxed base case and, furthermore, with $\Sigma_{{\sf modest}}, \Sigma'$ and $\Phi'$ defined as above,
  it satisfies the conditions in Theorem~\ref{thm:nonembed_deg_must_be_small_rephrase_maximal_relaxed}.
\end{lemma}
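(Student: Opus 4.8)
## Proof Plan for Lemma~\ref{lem:nu_satisfies}

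\textbf{Overall strategy.} The plan is to verify each of the numbered conditions in Theorem~\ref{thm:nonembed_deg_must_be_small_rephrase_maximal_relaxed} for the distribution $\nu$ constructed above, with the bulk of the work being condition 6(a) (the relaxed base case) and condition 6(b) (containing a maximal, fully merged distribution). Conditions 1--5 and 7 should follow fairly mechanically from the construction: the path trick (Lemma~\ref{lem:path_keeps_connectedness}) preserves pairwise connectedness and, for $\ell_1,\ell_2$ large enough, makes the relevant two-coordinate marginals full; the merge operation preserves all the structural hypotheses (as remarked after Definition~\ref{def:merge}); the duplication-of-symbols step is designed precisely so that, after sampling $(b,c)\in H^2$ uniformly and then $y,z$ uniformly in the fibers, the induced distribution on $(\gamma(y),\phi(z))$ becomes uniform on $H^2$ (hence on $\{(a,b,c) : a+b+c=0\}$ after accounting for $x'$), and $\nu_{y,z}$ becomes uniform on $\Gamma_{\sf dup}\times\tilde\Phi_{\sf dup}$. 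The master embedding of $\nu$ is inherited from that of $\nu'$ via Lemma~\ref{lem:master_stays} and Lemma~\ref{lem:master_captures_all}, composed with the duplication maps (duplication does not create new Abelian embeddings since duplicated symbols behave identically), and it stays saturated because $\nu'$ had a saturated master embedding and duplication only adds copies of existing fiber elements. Condition 7 (value of $y,z$ implies $x$ in $\nu$) holds because $\nu'$ is an $x$-merge, so $y,z$ imply $x$ in $\nu'$, and duplication of $y,z$ symbols does not destroy this.

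\textbf{The relaxed base case (condition 6(a)).} This is the heart of the matter. I would use the assumption $\card{\Sigma_{\sf modest}}=2$ together with the first defining property of the relaxed base case (which holds by construction: both elements of $\Sigma_{\sf modest}$ come from inputs $x^\star,{x^\star}'$ with $\sigma(x^\star)=\sigma({x^\star}')$, and $\sigma^{\sharp\ell}$ is constant on such inputs). For the quantitative inequality, I would argue by contradiction/compactness made effective: if $f,g,h$ have $2$-norm $1$, $\Expect{x,x'\in\Sigma_{\sf modest}}{\card{f(x)-f(x')}^2}\geq\tau$, yet $\card{\Expect{\nu}{f(x)g(y)h(z)}}\geq 1-\lambda$, then (using the uniformity of $\nu_{y,z}$ and Cauchy--Schwarz, which already gives the bound $\leq\norm{f}_2\norm{g}_2\norm{h}_2=1$) the near-equality in Cauchy--Schwarz forces $f(x)$ to be close to $\overline{g(y)h(z)}$ in an appropriate $L_2$ sense on the support of $\nu$. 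The key point that rules out large variance on $\Sigma_{\sf modest}$: because $\Sigma_{\sf modest}\subseteq\Sigma'$ and the construction guarantees $\Sigma'\times\Gamma\subseteq\supp(\nu_{x,y})$ (this is the ``full support on restriction of the first two coordinates'' property, which I would prove from the path-trick fullness), for each of the two modest symbols and each $y$ there is a $z$ with $(x,y,z)\in\supp(\nu)$; a near-embedding relation $f(x)\approx\overline{g(y)h(z)}$ then pins down $f$ on $\Sigma_{\sf modest}$ up to the ambiguity in the corresponding $z$'s, and one shows this ambiguity cannot produce variance $\tau$ without the embedding relation failing by a definite amount --- giving $\lambda\geq c\tau^C$. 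I expect this to require the same care as the Horn-SAT analysis in~\cite{BKMcsp2}: one must track that $f$ may vanish, but the modest set is chosen inside $\Sigma'$ precisely so that Horn-SAT embeddings (which must vanish on $\Sigma'$) cannot be responsible for variance on $\Sigma_{\sf modest}$, and genuine Abelian embeddings are constant there. The polynomial dependence between $\tau$ and $\lambda$ would come from an explicit (non-compactness) argument unwinding the Cauchy--Schwarz defect, exactly as the footnote after Statement~\ref{statement:idea_base_case} anticipates.

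\textbf{Containing a maximal fully merged distribution (condition 6(b)).} Here I would take $\Sigma''=\tilde\Sigma_{\sf fin}$ (or a suitable sub-alphabet) and $\tilde\nu$ to be essentially $\nu'$ restricted/pushed to the ``original'' copies $\Gamma'\subseteq\Gamma_{\sf dup}$, $\Phi'\subseteq\tilde\Phi_{\sf dup}$, with $a\colon\Sigma''\to\Sigma'$ the merge-representative map composed with the diagonal embedding. The containment $\{(a(x),y,z) : (x,y,z)\in\supp(\tilde\nu)\}\subseteq\supp(\nu)$ is immediate from how $\nu$ was built from $\nu'$ by duplication. For maximality and ``value of any two coordinates implies the third'' for $\tilde\nu$: since we started from a $\mu$ that was already maximal and fully merged (hypotheses 5 and 7 of Theorem~\ref{thm:nonembed_deg_must_be_small_rephrase_maximal2}), and the path trick preserves master embeddings while the merge makes things fully merged, I would invoke the remark after Definition~\ref{def:merge} and an argument as in Lemma~\ref{lem:pushing_counter_to_extreme} to conclude that applying merges (and the harmless duplication) to a maximal distribution, then restricting back, yields a maximal fully merged $\tilde\nu$; if $\tilde\nu$ is not maximal we would enlarge its support and re-merge, terminating as before. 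The saturation of $(\sigma\circ a,\gamma,\phi)$ on $\tilde\nu$ follows from Lemma~\ref{lem:master_stays} applied through the path tricks and the fact that $x$-merges restrict the master embedding in the obvious way. The existence of distinct $v,v'\in\Gamma'$ with $\gamma(v)=\gamma(v')$ is exactly the assumption we reduced to at the start of Section~\ref{sec:relaxed_implies}, transported through the construction (the path trick on $z$ and on $x$ leaves $\Gamma$ untouched, and $\Gamma'$ is a copy of $\Gamma$). The main obstacle I anticipate is bookkeeping: ensuring that the various fullness/pairwise-connectedness properties survive \emph{all} the operations simultaneously (path trick on $z$, then on $x$, then merge, then duplicate) and that the master embedding identifications are consistent across these steps --- this is routine in spirit but error-prone, and is why the lemma's proof is deferred to its own section.
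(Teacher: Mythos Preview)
Your overall plan is sound and matches the paper's structure for conditions 1--5 and 7. There are two places where your approach diverges from the paper's and where your sketch has real gaps.

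\textbf{Condition 6(b).} Your choice $\Sigma''=\tilde\Sigma_{\sf fin}$ with $\tilde\nu$ derived from $\nu'$ is not what the paper does, and your fallback maximality argument (``if $\tilde\nu$ is not maximal we would enlarge its support and re-merge'') is unnecessary and not obviously correct. The paper instead takes $\Sigma''=\{(x,\ldots,x):x\in\Sigma\}$, the diagonal copy of $\Sigma$ inside $\Sigma^{\ell_2}$, and $\tilde\nu$ to be the distribution of $((x,\ldots,x),(y,1),((z,\ldots,z),1))$ where $(x,y,z)\sim\mu$. Thus $\tilde\nu$ is literally an isomorphic copy of $\mu$, and $a$ is the merge map restricted to diagonals. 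Maximality and ``two coordinates imply the third'' for $\tilde\nu$ then follow \emph{directly} from the corresponding hypotheses on $\mu$ in Theorem~\ref{thm:nonembed_deg_must_be_small_rephrase_maximal2}: any distribution $\mathcal{D}$ on $\Sigma''\times\Gamma'\times\Phi'$ with ${\sf supp}(\tilde\nu)\subsetneq{\sf supp}(\mathcal{D})$ pulls back, via projection to first coordinates, to a distribution $\tilde{\mathcal{D}}$ on $\Sigma\times\Gamma\times\Phi$ with ${\sf supp}(\mu)\subsetneq{\sf supp}(\tilde{\mathcal{D}})$, and the maximality of $\mu$ gives the bound (after comparing non-embedding stabilities via Lemma~\ref{lem:compare_averaging_ops}). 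No re-merging loop is needed.

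\textbf{Condition 6(a).} You correctly identify the dichotomy --- non-trivial Horn-SAT embeddings vanish on $\Sigma'$, genuine Abelian embeddings are constant on $\Sigma_{\sf modest}$ --- but you assert the former without proof. This is the content of the paper's Claim~\ref{claim:find_sigma'}, and its argument uses a mechanism you do not mention: the fullness of ${\sf supp}(\mu')$ on $\Sigma\times\Gamma$ (arranged by the first path trick on $z$) forces $g$ to be nowhere zero once $f$ is nonzero at some diagonal point $a(x^\star,\ldots,x^\star)$; pairwise connectedness then propagates nonvanishing to $f$ and $h$ everywhere, so one can take logarithms and conclude $f$ is a genuine embedding function, contradicting non-triviality. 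For the quantitative bound, ``unwinding the Cauchy--Schwarz defect'' is too vague. The paper's argument (Section~\ref{sec:relaxed_base_quan}) assumes $\card{\E_\nu[fgh]}\geq 1-\tau^{50m}$ and proceeds through explicit claims: near-equality forces $\card{\overline{f(x)}-g(y)h(z)}\leq\tau^{20m}$ pointwise on ${\sf supp}(\nu)$; the fullness of $\Sigma'\times\Gamma_{\sf dup}$ in ${\sf supp}(\nu_{x,y})$ then forces $\card{g(y)}\gtrsim\sqrt\tau$ everywhere; an interval-separation trick shows $\card{f},\card{h}$ are bounded below on a controlled scale; finally Dirichlet's approximation theorem (as in Lemma~\ref{lem:turn_infinite_to_finite}) rounds the approximate embedding $(\log\overline f,\log g,\log h)$ to an exact one, which must be constant on $\Sigma_{\sf modest}$, contradicting the variance assumption. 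The polynomial dependence $c\tau^C$ comes out of the explicit exponents in this rounding.
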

\begin{proof}
  We discuss the master embedding $(\sigma',\gamma',\phi')$ of $\nu$:
  \begin{enumerate}
    \item Let $(\sigma,\gamma,\phi)$ be a master embedding of $\mu$. In Lemma~\ref{lem:master_stays} we have seen how master embedding evolve under the path trick,
    hence we get a master embedding for $\mu''$; abusing notation we denote it also by $(\sigma,\gamma,\phi)$. Note that if two symbols $\vec{x}$ and
    $\vec{x}'$ are to be merged in $\nu'$, then they are mapped to the same group element by $\sigma$. Indeed, if there are $y\in \Gamma$ and $z\in\tilde{\Phi}$
    such that $(\vec{x},y,z)$ and $(\vec{x}',y,z)$ are both in the support of $\mu''$, then
    \[
    \sigma(\vec{x}) + \gamma(y) + \phi(z) = 0 = \sigma(\vec{x}') + \gamma(y) + \phi(z),
    \]
    hence $\sigma(\vec{x}) = \sigma(\vec{x}')$. It follows that $\sigma$ is constant on each connected component of $G_{x,\mu'',{\sf merge}}$, hence we may
    define $\sigma(a(\vec{x})) = \sigma(\vec{x})$ unambiguously and get an embedding for $\nu'$.

    Lastly, a master embedding for $\nu$ follows. The map $\sigma$ stays the same, and $\gamma$ and $\phi$ simply ignore $i$ and $j$, that is,
    \[
    \gamma'(y,i) = \gamma(y),
    \qquad
    \phi'(z,j) = \phi(z).
    \]
    We note that $(\sigma,\gamma',\phi')$ is a master embedding of $\nu$.
    To see that, first note there is a $1$-to-$1$ correspondence between embeddding of $\nu$ and embedding of $\nu'$:
    indeed, note that for all $(y,i)$ and $(y,i')$ there are $x$ an $(z,j)$ such that $(x,(y,i),(z,j))$ and $(x,(y,i'),(z,j))$ are in the support
    of $\nu'$ it follows that any embedding of the second symbol must ignore $i$, thereby essentially be an embedding of $\nu'$. Thus, any embedding of
    $\nu$ must ignore $i$ and $j$, hence is also an embedding of $\nu'$, and thereby must be equivalent to some coordinate of $(\sigma,\gamma,\phi)$
    (as it is a master embedding of $\nu'$).
    \item For any $b,c\in H$, sampling $(x,(y,i),(z,j))\sim \nu$ we get that the probability that $\gamma'(y,i) = b$ and $\phi'(z,j) = c$ is $1/\card{H}^2$ by definition,
    and then $\sigma(x)$ must be $-b-c$. Hence, the distribution of $(\sigma(x),\gamma(y,i),\phi(z,j))$ is uniform over
    $\sett{(a,b,c)\in H^3}{a+b+c = 0}$.
    \item We note that it follows that the master embedding of $\nu$ maps the symbols in $\Sigma_{{\sf modest}}$ to the same group elements. Indeed,
    in $\mu$ we have that $\sigma(x^{\star}) = \sigma({x^{\star}}')$, and following the evolution of the master embedding above (using Lemmas~\ref{lem:reverse_embed},~\ref{lem:master_stays} again) it follows that the master embedding of $\nu$ maps
    $a(x^{\star},\ldots,x^{\star})$ and $a({x^{\star}}',\ldots,{x^{\star}}')$ to the same group element.
    \item Finally, we note that as above we found distinct $v,v'\in\Gamma$ such that $\gamma(v) = \gamma(v')$, it follows that $(v,1)$ and $(v',1)$ are distinct elements
    in $\Gamma'$ such that $\gamma'(v,1) = \gamma'(v',1)$.
  \end{enumerate}
  Next, we note that the marginal distribution of $\nu$ on $\Gamma_{{\sf dup}}\times \tilde{\Phi}_{{\sf dup}}$ is uniform. Indeed, by definition it is clear this
  distribution is a product distribution, so it suffices to argue that its marginal on each one of $\Gamma_{{\sf dup}}$ and $\tilde{\Phi}_{{\sf dup}}$ is uniform.
  To see that, note that for all $t = (t',i')\in \Gamma_{{\sf dup}}$
  \[
        \Prob{(x,(y,i),(z,j))\sim \nu}{(y,i) = t}
        =\frac{1}{\card{H}}\cdot \frac{1}{d_2(t')}\cdot\frac{d_2(t')}{\prod\limits_{a\in H}d_2(a)}
        =\frac{1}{\card{H}}\cdot\frac{1}{\prod\limits_{a\in H}d_2(a)},
  \]
  wherein the first factor counts the probability we chose $b=\gamma(t')$ in the process, the second factor counts the probability we chose a specific pre-image
  $y$ of $t'$, and the third factor counts the probability that $i = i'$. This probability is independent of $t$, hence the marginal of $\nu$ on $\Gamma_{{\sf dup}}$
  is uniform. The same argument shows that the marginal of $\nu$ on $\tilde{\Phi}_{{\sf dup}}$ is uniform.

  We now argue that $\Sigma'\times \Gamma_{{\sf dup}}\subseteq {\sf supp}(\nu_{x,y})$. Indeed, by construction we have that
  $\Sigma\times \Gamma \subseteq {\sf supp}(\mu'_{x,y})$ and hence $\sett{(x,\ldots,x)}{x\in \Sigma}\times \Gamma \subseteq {\sf supp}(\mu''_{x,y})$.
  Thus, after the $x$-merge we get that $\Sigma'\times \Gamma\subseteq {\sf supp}(\nu'_{x,y})$, and after duplicating symbols we get that
  $\Sigma'\times \Gamma_{{\sf dup}}\subseteq {\sf supp}(\nu_{x,y})$.

  We defer the proof of the rest of properties of $\nu$  to Section~\ref{sec:modest_case_main}.
\end{proof}

To finish this section, we quickly explain how to derive Theorem~\ref{thm:nonembed_deg_must_be_small_rephrase_maximal2} from Theorem~\ref{thm:nonembed_deg_must_be_small_rephrase_maximal_relaxed}.
\subsubsection{Proof that Theorem~\ref{thm:nonembed_deg_must_be_small_rephrase_maximal_relaxed} implies Theorem~\ref{thm:nonembed_deg_must_be_small_rephrase_maximal2}}
With the above set-up above, if $\card{\Sigma_{{\sf modest}}} = 1$
then we are done by Lemma~\ref{lem:ugly_modest_case}, so assume otherwise. Thus, $\card{\Sigma_{{\sf modest}}} > 1$, and
by Lemma~\ref{lem:nu_satisfies} it follows that $\nu$ satisfies all of the conditions of Theorem~\ref{thm:nonembed_deg_must_be_small_rephrase_maximal_relaxed}.
By Lemma~\ref{lem:from_mu_to_path} we may upper bound
\[
\card{\Expect{(x,y,z)\sim \mu^{\otimes n}}{f(x)g(y)h(z)}}^{\ell}
\leq
\card{\Expect{(x,y,z)\sim{\mu''}^{\otimes n}}{F(x)g(y)H(z)}},
\]
where $F$ and $H$ are some bounded functions, $\ell = O(1)$  and the marginal of ${\mu''}_y$ is the same as $\mu_y$. Hence, it suffices to
establish the conclusion of Theorem~\ref{thm:nonembed_deg_must_be_small_rephrase_maximal2} for $\mu''$. Using Lemma~\ref{lem:merge}, it
suffices to prove the statement for the $x$-merge of $\mu''$, namely for $\nu'$.

For the distribution $\nu$, by Theorem~\ref{thm:nonembed_deg_must_be_small_rephrase_maximal_relaxed}
we know that it satisfies the conclusion of Theorem~\ref{thm:nonembed_deg_must_be_small_rephrase_maximal2}. Consider the distribution
$\nu_{{\sf pre}}$ over $\tilde{\Sigma}_{{\sf final}}\times \Gamma \times \tilde{\Phi}$ defined as: sample $(x,(y,i),(z,j))\sim \nu$
and output $(x,y,z)$. Given functions $f\colon \tilde{\Sigma}_{{\sf final}}^n\to\mathbb{C}$,
$g\colon \Gamma^n\to\mathbb{C}$ and $h\colon \tilde{\Phi}^n\to\mathbb{C}$ we may define
$f'\colon \tilde{\Sigma}_{{\sf final}}^n\to\mathbb{C}$,
$g'\colon \Gamma_{\sf dup}^n\to\mathbb{C}$ and
$h'\colon \tilde{\Phi}_{\sf dup}^n\to\mathbb{C}$
by $f' = f$ and
\[
g'((y_1,i_1),\ldots,(y_n,i_n)) = g(y_1,\ldots,y_n),
\qquad
h'((z_1,j_1),\ldots,(z_n,j_n)) = h(z_1,\ldots,z_n).
\]
Then
\[
\Expect{(x,y,z)\sim \nu_{{\sf pre}^{\otimes n}}}{f(x)g(y)h(z)}
=\Expect{(x',y',z')\sim \nu^{\otimes n}}{f'(x)g'(y)h'(z)}
\]
and the master embeddings of $\nu$ and $\nu_{{\sf pre}}$ are essentially the same, so ${\sf NEStab}_{1-\delta}(g') = {\sf NEStab}_{1-\delta}(g)$.
Thus, as $\nu$ satisfies the conclusion of Theorem~\ref{thm:nonembed_deg_must_be_small_rephrase_maximal2}, it follows that $\nu_{\sf pre}$ also
satisfies the conclusion of Theorem~\ref{thm:nonembed_deg_must_be_small_rephrase_maximal2}.

We now note that $\nu'$ and $\nu'_{{\sf pre}}$ are two distributions over $\tilde{\Sigma}_{{\sf final}}\times \Gamma \times \tilde{\Phi}$ with
the same support and in which the probability of each atom is at least $\alpha' = \alpha'(\alpha,m)>0$, hence the conclusion of
Theorem~\ref{thm:nonembed_deg_must_be_small_rephrase_maximal2} is equivalent for them. Indeed, we may write
$\nu' = \beta\nu'_{{\sf pre}} + (1-\beta)\mathcal{D}$ where $\beta = \beta(\alpha,m)>0$ and $\mathcal{D}$ is some distribution.
Thus, given functions $f\colon \tilde{\Sigma}_{{\sf final}}^n\to\mathbb{C}$,
$g\colon \Gamma^{n}\to\mathbb{C}$ and $h\colon \tilde{\Phi}^n\to\mathbb{C}$ that are $1$-bounded
and ${\sf NEStab}_{1-\delta}(g; \nu'^{\otimes n})\leq \delta$ we may choose $J\subseteq_{\beta} [n]$, sample
$(\tilde{x},\tilde{y},\tilde{z})\sim \mathcal{D}^{\overline {J}}$, set
\[
\tilde{f} = f_{\overline{J}\rightarrow \tilde{x}},
\qquad
\tilde{g} = g_{\overline{J}\rightarrow \tilde{y}},
\qquad
\tilde{h} = h_{\overline{J}\rightarrow \tilde{z}},
\]
and get that
\[
\Expect{(x,y,z)\sim {\nu'}^{\otimes n}}{f(x)g(y)h(z)}
=\Expect{J, (\tilde{x},\tilde{y},\tilde{z})\sim \mathcal{D}^{\overline{J}}}{\Expect{(x',y',z')\sim {\nu_{{\sf pre}}'}^{J}}{\tilde{f}(x')\tilde{g}(y')\tilde{h}(z')}}.
\]
We denote the inner expectation by $\phi_{\nu_{\sf pre}}(\tilde{f},\tilde{g},\tilde{h})$, and
let $E$ be the event that ${\sf NEStab}_{1-\sqrt{\delta}}(\tilde{g}; \nu_{{\sf pre}}'^{I})\leq \sqrt{\delta}$.
By Claims~\ref{claim:rr_nestab},~\ref{claim:increase_noise_decrease_stab} it follows that
\[
\Expect{\substack{J\subseteq_{\beta}[n]\\ \tilde{y}\sim \mathcal{D}_y^{\overline{J}}}}{{\sf NEStab}_{1-\sqrt{\delta}}(\tilde{g}; \nu_{{\sf pre}}')}
\leq {\sf NEStab}_{1-c(m,\alpha)\beta\sqrt{\delta}}(g; \nu')
\leq {\sf NEStab}_{1-\delta}(g; \nu')
\leq \delta.
\]
Thus, by Markov's inequality we have that $\Prob{}{\overline{E}}\leq \sqrt{\delta}$, and so
\begin{align*}
\card{\Expect{J, (\tilde{x},\tilde{y},\tilde{z})\sim \mathcal{D}^{\overline{J}}}{\phi_{\nu_{\sf pre}}(\tilde{f},\tilde{g},\tilde{h})}}
&\leq
\card{\Expect{J, (\tilde{x},\tilde{y},\tilde{z})\sim \mathcal{D}^{\overline{J}}}{1_E\phi_{\nu_{\sf pre}}(\tilde{f},\tilde{g},\tilde{h})}}
+
\card{\Expect{J, (\tilde{x},\tilde{y},\tilde{z})\sim \mathcal{D}^{\overline{J}}}{1_{\overline{E}}\phi_{\nu_{\sf pre}}(\tilde{f},\tilde{g},\tilde{h})}}\\
&\leq
M\delta^{s/2}
+\Prob{}{\overline{E}}
\leq (M+1)\delta^{s/2},
\end{align*}
where we used the fact that if $E$ holds, then we may bound $\card{\phi_{\nu_{\sf pre}}(\tilde{f},\tilde{g},\tilde{h})}\leq M\delta^{s}$ for some
$M,s>0$ depending only on $m$ and $\alpha$, and also we used the upper bound on $\Prob{}{\overline{E}}$.
\qed

\subsection{The case that $\card{\Sigma_{{\sf modest}}} = 2$: Proof of Lemma~\ref{lem:nu_satisfies}}\label{sec:modest_case_main}
\subsubsection{On the Supports of Horn-SAT Embeddings}\label{sec:satisfies_relaxed}
A triplet of functions $f\colon \tilde{\Sigma}_{{\sf fin}}\to\mathbb{C}$, $g\colon \Gamma_{{\sf dup}}\to\mathbb{C}$ and $h\colon \tilde{\Phi}_{{\sf dup}}\to\mathbb{C}$
is called a Horn-SAT embedding if $f(\vec{x}) = g(y) h(\vec{z})$; we say it is non-trivial if $f \not\in {\sf Embed}_{\sigma}(\nu)$. Consider the collection of subsets
\[
\mathcal{F} = \sett{F}{\exists (f,g,h)\text{ a non-trivial Horn-SAT embedding of $\nu$, }F = {\sf supp}(f)}.
\]
Let $F_{{\sf union}} = \bigcup_{F'\in\mathcal{F}} F'$.
\begin{claim}\label{claim:find_sigma'}
  For $x\in\Sigma$ we have that $a(x,\ldots,x)\not\in F_{{\sf union}}$.
\end{claim}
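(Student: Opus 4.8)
\textbf{Proof proposal for Claim~\ref{claim:find_sigma'}.}

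The plan is to show that if $a(x,\ldots,x)$ were in the support of $f$ for some non-trivial Horn-SAT embedding $(f,g,h)$ of $\nu$, then one could pull this Horn-SAT embedding back through the construction of $\nu$ --- undoing the duplication step, the $x$-merge, and the two path tricks --- to obtain a non-trivial Horn-SAT embedding of the original distribution $\mu$ whose $\Sigma$-component is nonzero at $x$. But this would contradict the structure of $\mu$: since $\mu$ is fully merged (in particular $y,z$ imply $x$) and maximal, and $\card{\Sigma}>\card{H}$, the only way a Horn-SAT embedding of $\mu$ can avoid vanishing on the ``diagonal'' symbol corresponding to $x$ is if it is in fact an honest Abelian embedding, i.e.\ $f\in{\sf Embed}_\sigma(\mu)$, contradicting non-triviality. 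So the first step is to set up this pull-back carefully, coordinate by coordinate through the chain $\mu \to \mu' \to \mu'' \to \nu' \to \nu$.

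Concretely, I would proceed as follows. First, undoing duplication is immediate: if $(f,g,h)$ is a Horn-SAT embedding of $\nu$, then since $\gamma',\phi'$ ignore the duplication indices $i,j$ and the support of $\nu$ projects onto the support of $\nu_{{\sf pre}}$, the relation $f(\vec x)=g(y,i)h(\vec z,j)$ forces $g(y,i)$ and $h(\vec z,j)$ to be (up to the duplication-index dependence, which can be absorbed) a Horn-SAT embedding of $\nu_{{\sf pre}}$, hence of $\nu'$. Next, undoing the $x$-merge: a Horn-SAT embedding $(f,g,h)$ of $\nu'$ on $\tilde\Sigma_{{\sf fin}}$ lifts to one of $\mu''$ on $\tilde\Sigma$ by composing with the merge map, $\tilde f(\vec x) = f(a(\vec x))$; and $a(x,\ldots,x)\in{\sf supp}(f)$ exactly means the diagonal symbol $(x,\ldots,x)\in\tilde\Sigma$ lies in ${\sf supp}(\tilde f)$. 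Then, undoing the two path tricks: here I would invoke the analogue of Lemma~\ref{lem:reverse_embed} for Horn-SAT embeddings --- the same signed-telescoping identity works, since a Horn-SAT embedding over the product-structure alphabet $\Sigma'\subseteq\Sigma^\ell$ decomposes multiplicatively over the edges of the walk in exactly the way an Abelian embedding decomposes additively. This yields a Horn-SAT embedding $(f^\flat,\gamma\text{-part},\phi\text{-part})$ of $\mu$ with $f^\flat(x)\neq 0$, because the diagonal input $(x,\ldots,x)$ corresponds to the trivial back-and-forth walk on the single edge labeled $x$, and on that input the multiplicative telescoping formula gives $\tilde f(x,\ldots,x) = f^\flat(x)^{\pm 1}$ (the signs cancel over an odd-length back-and-forth walk), so $f^\flat(x)\neq 0$.

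Finally, the contradiction: $(f^\flat, g^\flat, h^\flat)$ is a Horn-SAT embedding of $\mu$ with $f^\flat(x)\neq 0$. I claim this forces $f^\flat$ to never vanish, hence (via the logarithm argument, Lemma~\ref{lem:turn_infinite_to_finite} and Claim~\ref{claim:embedding_fns}) to be an embedding function, contradicting non-triviality. To see $f^\flat$ never vanishes: suppose $f^\flat(x')=0$ for some $x'$. Since $\mu$ is pairwise connected and fully merged, there is a short chain of symbols $x=x_0,x_1,\ldots,x_k=x'$ where consecutive $x_t,x_{t+1}$ share a common $(y,z)$ with $(x_t,y,z),(x_{t+1},y,z)\in{\sf supp}(\mu)$ --- but wait, full-mergedness says $y,z$ imply $x$, so no such pair exists unless $x_t=x_{t+1}$; instead I would use the Horn-SAT relation directly: for any $(x'',y,z)\in{\sf supp}(\mu)$ we have $f^\flat(x'')=g^\flat(y)h^\flat(z)$, and if some value of $f^\flat$ is nonzero while another is zero, then fixing the $(y,z)$ that produces the nonzero one and varying to reach the zero one via pairwise connectivity in the $\{1,2\}$ and $\{1,3\}$ graphs gives $g^\flat(y)h^\flat(z)=0$ at a point where it should be nonzero --- the precise bookkeeping here is the main obstacle, and is essentially the same argument used in~\cite{BKMcsp2} to handle the $0$-pattern analysis; I expect it to occupy the bulk of the formal proof. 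Once $f^\flat$ is shown to be nowhere zero, take logs to get an honest Abelian embedding of $\mu$, conclude $f^\flat\in{\sf Embed}_\sigma(\mu)$, and trace this back up the chain to get $f\in{\sf Embed}_\sigma(\nu)$, contradicting non-triviality of the original Horn-SAT embedding.
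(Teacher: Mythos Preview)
Your approach has a real gap, and it stems from pulling back too far. The paper's proof does not go all the way to $\mu$; it works at the level of $\mu'$ (after the first path trick, on $z$), and the whole point of that path trick was to arrange that ${\sf supp}(\mu'_{x,y}) = \Sigma\times\Gamma$ is full.

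Here is the short argument. Suppose $(f,g,h)$ is a non-trivial Horn-SAT embedding of $\nu$ with $f(a(x^\star,\ldots,x^\star))\neq 0$. Set $f'(x)=f(a(x,\ldots,x))$; then for every $(x,y,\vec z)\in{\sf supp}(\mu')$ one has $f'(x)=g(y)h(\vec z)$, simply because the diagonal $(x,\ldots,x)$ is a valid walk in the $x$-path-trick --- no ``multiplicative telescoping'' is needed or even well-defined here, since that would require dividing by possibly-zero values. Now the full support of $\mu'$ on $\Sigma\times\Gamma$ means every $y$ pairs with $x^\star$, so $g(y)\neq 0$ for all $y$. Since $g$ never vanishes, the Horn-SAT relation gives $f'(x)=0$ iff $h(\vec z)=0$ on ${\sf supp}(\mu')$; pairwise connectedness of $\mu'$ then forces these zero-sets to be all-or-nothing, and $f'(x^\star)\neq 0$ rules out ``all''. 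So $g$ and $h$ never vanish, hence $f$ never vanishes on all of $\tilde\Sigma_{{\sf fin}}$, and taking logs (via Claim~\ref{claim:embedding_fns}) gives $f\in{\sf Embed}_\sigma(\nu)$ --- the contradiction lives at the level of $\nu$, not $\mu$.

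Your step 4, by contrast, cannot be completed: $\mu$ need not have full support on $\Sigma\times\Gamma$, so there is no reason a Horn-SAT embedding of $\mu$ with a single nonzero value must be nowhere zero --- this is exactly why one path trick was applied before attempting this. And your final trace-back also fails: even if you established $f^\flat\in{\sf Embed}_\sigma(\mu)$, that only controls $f$ on the diagonal image $\Sigma'\subseteq\tilde\Sigma_{{\sf fin}}$, saying nothing about $f$ on the rest of its domain.
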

\begin{proof}
  Assume towards contradiction otherwise, and let $x^{\star}$ be such $x$.
  Then there is $F'\in \mathcal{F}$ containing $a(x^{\star},\ldots,x^{\star})$, hence there is a non-trivial Horn-SAT
  embedding $f$, $g$ and $h$ such that
  $f(a(x^{\star},\ldots,x^{\star}))\neq 0$.

  Define $f'\colon \Sigma\to\mathbb{C}$ by $f'(x) = f(a(x,\ldots,x))$, and note that for all $(x,y,z)\in {\sf supp}(\mu')$ we get that
  \[
  f'(x) = g(y)h(\vec{z}),
  \]
  and also that $f'(x^{\star})\neq 0$. In $\mu'$ the support of the distribution on $\Sigma\times \Gamma$ is full, and hence we conclude that $g(y)\neq 0$
  for all $y$. Thus, for all $(x,y,\vec{z})\in {\sf supp}(\mu')$ we get that $f'(x) = 0$ if and only if $h(\vec{z}) = 0$, and since $\mu'$ is pairwise connected
  it follows that the functions $f'$ and $h$ must either be trivially $0$, or else never $0$. As $f'(x^{\star})\neq 0$, we get that $f'$ and $h$ never
  vanish.

  Thus, $g$ and $h$ never vanish, and as $f,g,h$ is a Horn-SAT embedding it follows that $f$ also never vanishes.
  Consider the principal branch of the complex logarithm function $\log$, and define
  $f''(x) = \log(f(x))$, $g''(y) = \log(g(y))$ and $h''(z) = \log(h(z))$; by Claim~\ref{claim:embedding_fns}
  we get that $f''\in {\sf Embed}_{\sigma}(\nu)$, and as $\log$ is injective it follows that $f\in {\sf Embed}_{\sigma}(\nu)$,
  in contradiction to the fact that $(f,g,h)$ is a non-trivial Horn-SAT embedding.
\end{proof}

We have the following immediate corollary, asserting that Horn-SAT embeddings must be constant on $\Sigma_{{\sf modest}}$ (thereby giving
some sense that the relaxed base case holds for $\nu$).
\begin{corollary}\label{cor:Horn_vanish_sigma'}
  If $f\colon \tilde{\Sigma}_{{\sf fin}}\to\mathbb{C}$, $g\colon \Gamma_{{\sf dup}}\to\mathbb{C}$ and $h\colon \tilde{\Phi}_{{\sf dup}}\to\mathbb{C}$
  form an Horn-SAT embedding, then $f|_{\Sigma_{{\sf modest}}}$ is constant.
\end{corollary}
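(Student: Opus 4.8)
Looking at Corollary~\ref{cor:Horn_vanish_sigma'}, the statement I need to prove is that if $(f,g,h)$ forms a Horn-SAT embedding of $\nu$ (meaning $f(\vec{x}) = g(y)h(\vec{z})$ on the support of $\nu$), then $f$ restricted to $\Sigma_{{\sf modest}}$ is constant. Recall $\Sigma_{{\sf modest}} = \{a(x^{\star},\ldots,x^{\star}), a({x^{\star}}',\ldots,{x^{\star}}')\}$, where $x^{\star},{x^{\star}}'\in\Sigma$ are distinct symbols with $\sigma(x^{\star}) = \sigma({x^{\star}}')$.

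\textbf{The plan.} First I would dispose of the trivial embedding case: if $(f,g,h)$ is a trivial Horn-SAT embedding, then $f \in {\sf Embed}_{\sigma}(\nu)$, which means $f$ depends only on the master embedding value $\sigma$; since both elements of $\Sigma_{{\sf modest}}$ map to the same group element under $\sigma$ (this is part of the construction, verified in the third bullet of the proof of Lemma~\ref{lem:nu_satisfies}), $f$ takes the same value on both, hence $f|_{\Sigma_{{\sf modest}}}$ is constant. So the interesting case is a non-trivial Horn-SAT embedding. In that case, I would apply Claim~\ref{claim:find_sigma'}: it asserts that for every $x \in \Sigma$, the symbol $a(x,\ldots,x)$ does not lie in $F_{{\sf union}} = \bigcup_{F'\in\mathcal{F}}F'$, where $\mathcal{F}$ is the collection of supports of non-trivial Horn-SAT embeddings. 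In particular, applying this to $x = x^{\star}$ and $x = {x^{\star}}'$, neither $a(x^{\star},\ldots,x^{\star})$ nor $a({x^{\star}}',\ldots,{x^{\star}}')$ is in the support of our non-trivial Horn-SAT embedding $f$. Therefore $f$ vanishes at both elements of $\Sigma_{{\sf modest}}$, so $f|_{\Sigma_{{\sf modest}}} \equiv 0$, which is certainly constant.

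\textbf{Combining the cases.} Putting these together: in either case ($f$ an embedding function, or $f$ part of a non-trivial Horn-SAT embedding), $f$ is constant on $\Sigma_{{\sf modest}}$ — in the first case a possibly nonzero constant, in the second case identically zero. This completes the proof. I do not expect any genuine obstacle here; the corollary is, as the excerpt says, an "immediate corollary" of Claim~\ref{claim:find_sigma'}. The only thing that requires a moment's care is handling the trivial/embedding case separately, since Claim~\ref{claim:find_sigma'} only speaks about \emph{non-trivial} Horn-SAT embeddings (the definition of $\mathcal{F}$ explicitly restricts to non-trivial ones), so one must observe that the embedding-function case is handled directly by the fact that $\sigma$ is constant on $\Sigma_{{\sf modest}}$ together with the definition of ${\sf Embed}_{\sigma}(\nu)$ as the span of functions of the form $\chi_{\sigma}$.

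\textbf{Remark on what is actually needed downstream.} It is worth noting that for the purpose of eventually establishing the relaxed base case for $\nu$ (Definition~\ref{def:relaxed_base}), what matters is precisely that any triplet of functions satisfying the Horn-SAT equation $f(\vec x) = g(y)h(\vec z)$ has $f$ constant on $\Sigma_{{\sf modest}}$; contrapositively, if $f$ has positive variance on $\Sigma_{{\sf modest}}$ then no such exact equality can hold, and a compactness/quantitative argument (carried out later in Section~\ref{sec:base_case}) upgrades this to the quantitative bound $\card{\Expect{(x,y,z)\sim \mu^{\otimes n}}{f(x)g(y)h(z)}}\leq (1-c\tau^{C})\norm{f}_2\norm{g}_2\norm{h}_2$. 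So this corollary is the qualitative heart of the relaxed base case, and its proof is the short combination of Claim~\ref{claim:find_sigma'} with the embedding-function case described above.
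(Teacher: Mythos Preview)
Your proposal is correct and essentially matches the paper's proof. The paper argues by contradiction (if $f|_{\Sigma_{{\sf modest}}}$ is not constant then $f\notin{\sf Embed}_\sigma(\nu)$, so the embedding is non-trivial, so by Claim~\ref{claim:find_sigma'} $f$ vanishes on $\Sigma'\supseteq\Sigma_{{\sf modest}}$, contradiction), while you do the equivalent direct case split; the content is identical.
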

\begin{proof}
  Suppose towards contradiction that $f|_{\Sigma_{{\sf modest}}}$ is not constant. Note that the master embedding of
  $\nu$ assigns that same group element to both members of $\Sigma_{{\sf modest}}$, hence we get that $f\not\in{\sf Embed}_{\sigma}(\nu)$.
  Thus, $f,g,h$ is a non-trivial Horn-SAT embedding, hence by Claim~\ref{claim:find_sigma'} $f$ must vanish on $\Sigma'$, and as
  $\Sigma_{{\sf modest}}\subseteq \Sigma'$ it follows that $f$ must vanish on $\Sigma_{{\sf modest}}$, and contradiction.
\end{proof}
\subsubsection{Proof of Lemma~\ref{lem:nu_satisfies}: Compactness}
We begin by establishing a weak form of the relaxed base case via a compactness argument. As such, this argument does
not produce quantitative bounds (which are crucial for our application), however it serves as a good warm-up for the
actual argument proving the relaxed base case (which is very similar in spirit).

Namely, we show that for all $\tau>0$ there is $\tau'>0$ such that
if $f\colon \tilde{\Sigma}_{{\sf final}}\to\mathbb{C}$, $g\colon\Gamma_{\sf dup}\to\mathbb{C}$ and $h\colon \tilde{\Phi}_{\sf dup}\to\mathbb{C}$ are functions such that
$\Expect{x,x'\in\Sigma_{{\sf modest}}}{\card{f(x)-f(x')}^2}\geq \tau$, then
\[
\card{\Expect{(\vec{x},y,z)\sim \nu}{f(x)g(y)h(z)}}\leq (1-\tau')\norm{f}_2\norm{g}_2\norm{h}_2.
\]
Indeed, otherwise we could find a sequence $f_m, g_m, h_m$ with $2$-norm equal to $1$ such that the left hand side approaches $1$,
and by limiting we could find $f,g,h$ as above, in which $\card{\Expect{(\vec{x},y,z)\sim \nu}{f(x)g(y)h(z)}} = 1$.
By Cauchy-Schwarz we have
\[
1
=\card{\Expect{(x,y,z)\sim \nu}{f(x)g(y)h(z)}}
\leq
\sqrt{\Expect{(x,y,z)\sim \nu}{\card{f(x)}^2}}
\sqrt{\Expect{(x,y,z)\sim \nu}{\card{g(y)h(z)}^2}}
=\norm{f}_2\norm{g}_2\norm{h}_2 = 1,
\]
hence Cauchy-Schwarz is an equality and so $f(x) = \theta g(y)h(z)$ in the support of $\nu$, where $\theta$ has absolute value equal to
$1$. Multiplying $g$ by $\theta$ we get that $f(x) = g(y)h(z)$, hence this triplet forms an Horn-SAT embedding. By limiting we have
that $\E_{x,x'\in\Sigma_{{\sf modest}}}\left[\card{f(x) - f(x')}^2\right]\geq \tau$, and this contradicts Corollary~\ref{cor:Horn_vanish_sigma'}.

\subsubsection{Proof of Lemma~\ref{lem:nu_satisfies}: Unraveling Compactness}\label{sec:relaxed_base_quan}
In this section we show that $\nu$ satisfies the relaxed base case.
Fix $f,g,h$ and $\tau>0$ as in the definition of the relaxed base case;
by normalizing, we may assume that the $2$-norms of each one of $f,g$ and $h$ is equal to $1$.
We will assume that $\tau\leq c$ where $c = c(m,\alpha) > 0$ is a constant to be determined, as otherwise we may lower $\tau$.

Assume towards contradiction that $\card{\Expect{(x,y,z)\sim \nu}{f(x)g(y)h(z)}}\geq 1-\tau^{50m}$.
Thus there is a complex number $\theta$ of absolute value $1$ such that
$\theta\Expect{(x,y,z)\sim \nu}{f(x)g(y)h(z)}\geq 1-\tau^{50m}$, and to simplify notation
we multiply $f$ by $\theta$ so that this inequality becomes $\Expect{(x,y,z)\sim \nu}{f(x)g(y)h(z)}\geq 1-\tau^{50m}$;
we note we may multiply $f$ by a constant without loss of generality, as this does not affect any of the assumptions on $f$.
\begin{claim}\label{claim:relaxed1}
  $\Expect{x\in \Sigma_{{\sf modest}}}{\card{f(x)}^2}\geq \frac{\tau}{4}$.
\end{claim}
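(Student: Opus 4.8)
The plan is to prove this purely from the hypothesis $\Expect{x,x'\in\Sigma_{{\sf modest}}}{\card{f(x)-f(x')}^2}\geq\tau$ that is built into the definition of the relaxed base case (Definition~\ref{def:relaxed_base}); neither the normalization $\norm{f}_2 = 1$ nor the contradiction hypothesis $\Expect{(x,y,z)\sim\nu}{f(x)g(y)h(z)}\geq 1-\tau^{50m}$ is needed for this particular step. This claim merely records the elementary fact that having variance at least $\tau$ on $\Sigma_{{\sf modest}}$ forces a commensurate amount of $\ell^2$-mass there, and it will be used as an input to the later claims.

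Concretely, I would read $\Expect{x\in\Sigma_{{\sf modest}}}{\cdot}$ and $\Expect{x,x'\in\Sigma_{{\sf modest}}}{\cdot}$ as expectations over $x$ (resp.\ independent $x,x'$) drawn uniformly from $\Sigma_{{\sf modest}}$, and expand the square:
\[
\Expect{x,x'\in\Sigma_{{\sf modest}}}{\card{f(x)-f(x')}^2}
= 2\Expect{x\in\Sigma_{{\sf modest}}}{\card{f(x)}^2} - 2\card{\Expect{x\in\Sigma_{{\sf modest}}}{f(x)}}^2,
\]
where the cross term collapses because $\Expect{x,x'}{f(x)\overline{f(x')}} = \card{\Expect{x}{f(x)}}^2$. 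Dropping the nonnegative term $2\card{\Expect{x\in\Sigma_{{\sf modest}}}{f(x)}}^2$ and combining with the hypothesis yields $\Expect{x\in\Sigma_{{\sf modest}}}{\card{f(x)}^2}\geq\tau/2$, which is in fact stronger than the claimed $\tau/4$. I would also note in passing that the replacement of $f$ by $\theta f$ performed just above the claim changes nothing here: since $\card{\theta} = 1$, both $\card{\theta f(x)}^2$ and $\card{\theta f(x)-\theta f(x')}^2$ equal their un-rotated counterparts. Moreover the displayed identity holds verbatim for any probability measure on $\Sigma_{{\sf modest}}$ as long as the same measure is used on both sides, so the argument is robust to whether one reads the expectations as uniform or as $\nu_x$ conditioned on $\Sigma_{{\sf modest}}$.

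There is essentially no obstacle in this step; it is a warm-up computation. The genuine work of establishing the relaxed base case lies in the subsequent claims of this section, which will need to exploit the near-sharpness of the correlation bound $\Expect{(x,y,z)\sim\nu}{f(x)g(y)h(z)}\geq 1-\tau^{50m}$ together with Corollary~\ref{cor:Horn_vanish_sigma'} (Horn-SAT embeddings are constant on $\Sigma_{{\sf modest}}$) to produce a contradiction, and it is there that the quantitative dependence on $\tau$ and $m$ will be extracted.
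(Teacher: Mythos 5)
Your proof is correct and takes essentially the same route as the paper: the paper just applies $\card{a-b}^2\leq 2\card{a}^2+2\card{b}^2$ and averages, obtaining the stated $\tau/4$, while your exact expansion of the square (dropping the nonnegative $2\card{\Expect{x\in\Sigma_{{\sf modest}}}{f(x)}}^2$ term) yields the marginally stronger bound $\tau/2$. The side remarks about the rotation by $\theta$ and the robustness to the choice of measure on $\Sigma_{{\sf modest}}$ are accurate and harmless.
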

\begin{proof}
  By $\card{a-b}^2\leq 2\card{a}^2 + 2 \card{b}^2$ for all $a,b\in\mathbb{C}$ we get
  \[
  \Expect{x\in \Sigma_{{\sf modest}}}{\card{f(x)}^2}
  \geq
  \frac{1}{4}\Expect{x,x'\in \Sigma'}{\card{f(x)-f(x')}^2}
  \geq \frac{\tau}{4}.
  \qedhere
  \]
\end{proof}

\begin{claim}\label{claim:relaxed2}
  For all $(x,y,z)\in{\sf supp}(\nu)$ we have $\card{\overline{f(x)} - g(y)h(z)}\leq \tau^{20m}$.
\end{claim}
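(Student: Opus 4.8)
\textbf{Plan of proof for Claim~\ref{claim:relaxed2}.}
The plan is to deduce the claimed pointwise bound from the near-equality in Cauchy--Schwarz, exactly mirroring the compactness argument from Section~\ref{sec:relaxed_implies} but tracking constants. First I would record the two relevant applications of Cauchy--Schwarz. Since the marginal of $\nu$ on $\Gamma_{{\sf dup}}\times \tilde{\Phi}_{{\sf dup}}$ is uniform (this was established in the proof of Lemma~\ref{lem:nu_satisfies}), we have
\[
\Expect{(x,y,z)\sim \nu}{f(x)g(y)h(z)}
\leq
\sqrt{\Expect{(x,y,z)\sim\nu}{\card{f(x)}^2}}\cdot \sqrt{\Expect{(x,y,z)\sim \nu}{\card{g(y)h(z)}^2}}
= \norm{f}_2\norm{g}_2\norm{h}_2 = 1,
\]
and the hypothesis gives that this product is at least $1-\tau^{50m}$. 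The standard quantitative form of the Cauchy--Schwarz defect then yields
\[
\Expect{(x,y,z)\sim \nu}{\card{\overline{f(x)} - g(y)h(z)}^2}\lll \tau^{50m},
\]
after renormalizing so that $\E\card{g(y)h(z)}^2 = 1$; here one uses that $\inner{\overline{f}}{gh}$ has absolute value (in fact real part, after the multiplication of $f$ by $\theta$ done just before the claim) at least $1-\tau^{50m}$ together with $\norm{\overline{f}}_2 = \norm{gh}_2 = 1$, so that $\norm{\overline{f} - gh}_2^2 = 2 - 2\,\mathrm{Re}\inner{\overline f}{gh}\leq 2\tau^{50m}$.

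Next I would convert this $L_2$-closeness into the pointwise bound on the support. This is where the assumption that the probability of each atom of $\nu$ is at least some $\alpha' = \alpha'(\alpha,m)>0$ comes in: for any fixed $(x,y,z)\in {\sf supp}(\nu)$,
\[
\alpha'\,\card{\overline{f(x)} - g(y)h(z)}^2
\leq
\Expect{(x',y',z')\sim \nu}{\card{\overline{f(x')} - g(y')h(z')}^2}
\lll \tau^{50m},
\]
so $\card{\overline{f(x)} - g(y)h(z)}^2 \lll_{\alpha,m} \tau^{50m}$, and hence $\card{\overline{f(x)} - g(y)h(z)}\lll_{\alpha,m}\tau^{25m}\leq \tau^{20m}$ provided $\tau\leq c(m,\alpha)$ is small enough to absorb the constant (which is exactly the standing assumption $\tau\leq c$). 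The only mild subtlety is bookkeeping the exact power: starting from exponent $50m$ in the hypothesis, the Cauchy--Schwarz step loses a factor of $2$ (harmless), and taking square roots halves the exponent to $25m$, still comfortably above the target $20m$; the slack $\tau^{5m}$ then swallows the $\alpha,m$-dependent constant once $\tau$ is below a threshold depending only on $\alpha$ and $m$.

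I do not expect a genuine obstacle here --- the claim is essentially the finitary version of ``near-equality in Cauchy--Schwarz forces the two vectors to be nearly parallel,'' combined with the trivial observation that on a finite space with atoms bounded below an $L_2$ bound upgrades to an $L_\infty$ bound on the support. The one point requiring a little care is that the $f$ appearing in the claim has already been multiplied by the unimodular constant $\theta$ (so that the expectation is real and close to $1$ rather than merely of modulus close to $1$); with that normalization in place $\mathrm{Re}\inner{\overline{f}}{gh}\geq 1-\tau^{50m}$ is immediate and the defect inequality $\norm{\overline f - gh}_2^2 \leq 2\tau^{50m}$ follows directly. This is the step I would write out in full in the actual proof; everything else is routine.
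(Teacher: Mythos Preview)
The proposal is correct and follows essentially the same approach as the paper: expand $\norm{\overline{f}-gh}_2^2 = 2 - 2\,\mathrm{Re}\,\Expect{}{fgh}\leq 2\tau^{50m}$ (using that $\mu_{y,z}$ is uniform so $\E\card{gh}^2 = 1$), then divide by the minimum atom probability $\alpha'(\alpha,m)$ to go pointwise, and absorb the constant into the spare exponent using $\tau\leq c(\alpha,m)$. Your bookkeeping is slightly different (you land at $\tau^{25m}$ before absorbing constants, the paper lands at $\tau^{20m}$ directly via $\alpha'^{-1}\tau^{50m}\leq \tau^{40m}$), but the argument is the same.
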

\begin{proof}
  We have
  \begin{align*}
    \Expect{(x,y,z)\sim\nu}{\card{\overline{f(x)} - g(y)h(z)}^2}
    &=2-
    \Expect{(x,y,z)\sim \nu}{f(x)g(y)h(z)}
    -\Expect{(x,y,z)\sim \nu}{\overline{f(x)g(y)h(z)}}\\
    &=2-2\Expect{(x,y,z)\sim \nu}{f(x)g(y)h(z)}\\
    &\leq 2\tau^{50m},
  \end{align*}
  and as the probability of each atom in $\nu$ is least $\alpha'(\alpha,m)>0$ we get that
  $\card{\overline{f(x)} - g(y)h(z)}^2\leq \alpha'^{-1}\tau^{50m}\leq \tau^{40m}$ for $c$ sufficiently
  small, giving the claim.
\end{proof}

\begin{claim}\label{claim:relaxed3}
  For all $y\in\Gamma_{{\sf dup}}$ we have $\card{g(y)}\ggg_{\alpha, m} \sqrt{\tau}$.
\end{claim}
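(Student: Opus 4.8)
Looking at Claim~\ref{claim:relaxed3}, the goal is to show that $g$ cannot be too small anywhere on $\Gamma_{{\sf dup}}$, using the information accumulated so far: the function $\overline{f}$ is pointwise very close to $g\cdot h$ on the support (Claim~\ref{claim:relaxed2}), $f$ has nontrivial $L_2$ mass on $\Sigma_{{\sf modest}}$ (Claim~\ref{claim:relaxed1}), and the distribution $\nu$ has full support on certain restricted coordinates ($\Sigma'\times \Gamma_{{\sf dup}}\subseteq {\sf supp}(\nu_{x,y})$, as established in the proof of Lemma~\ref{lem:nu_satisfies}).

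\medskip

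\noindent\textbf{Proof plan for Claim~\ref{claim:relaxed3}.} The plan is to argue that if $\card{g(y_0)}$ were too small for some $y_0\in \Gamma_{{\sf dup}}$, then via Claim~\ref{claim:relaxed2} the value of $\overline{f(x)}$ would have to be small for \emph{every} $x$ that is $\nu$-compatible with $y_0$; but since $\Sigma'\times \Gamma_{{\sf dup}}\subseteq {\sf supp}(\nu_{x,y})$, every $x\in\Sigma'$ (in particular every $x\in\Sigma_{{\sf modest}}\subseteq \Sigma'$) is compatible with $y_0$, and this would force $f$ to be small on all of $\Sigma_{{\sf modest}}$, contradicting Claim~\ref{claim:relaxed1}. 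Concretely: fix $y_0\in\Gamma_{{\sf dup}}$ and suppose $\card{g(y_0)}\leq \epsilon$ for a parameter $\epsilon$ to be chosen. For each $x\in\Sigma_{{\sf modest}}$, pick $z\in\tilde{\Phi}_{{\sf dup}}$ with $(x,y_0,z)\in{\sf supp}(\nu)$ (possible by the full-support property $\Sigma'\times\Gamma_{{\sf dup}}\subseteq {\sf supp}(\nu_{x,y})$). Claim~\ref{claim:relaxed2} gives $\card{\overline{f(x)} - g(y_0)h(z)}\leq \tau^{20m}$, and since $h$ is $1$-bounded — wait, $h$ need not be $1$-bounded here since we only normalized its $2$-norm; instead use that $\card{h(z)}\leq \norm{h}_2/\sqrt{\alpha'} \lll_{\alpha,m} 1$ because every atom of $\nu_z$ has probability $\geq \alpha'$. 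Hence $\card{f(x)} \leq \card{g(y_0)}\cdot\card{h(z)} + \tau^{20m} \lll_{\alpha,m} \epsilon + \tau^{20m}$ for every $x\in\Sigma_{{\sf modest}}$. Averaging, $\Expect{x\in\Sigma_{{\sf modest}}}{\card{f(x)}^2}\lll_{\alpha,m} (\epsilon+\tau^{20m})^2$. Comparing with Claim~\ref{claim:relaxed1}, which says this quantity is at least $\tau/4$, we get $(\epsilon+\tau^{20m})^2 \ggg_{\alpha,m}\tau$, i.e. $\epsilon \ggg_{\alpha,m}\sqrt{\tau} - \tau^{20m} \ggg_{\alpha,m}\sqrt{\tau}$ (absorbing the lower-order term by taking $c=c(m,\alpha)$ small enough so that $\tau^{20m}\leq \half\sqrt{\tau}$, which holds since $20m>\half$). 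Thus $\card{g(y_0)}\ggg_{\alpha,m}\sqrt{\tau}$, which is the claim.

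\medskip

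\noindent\textbf{Remarks on the main obstacle.} I do not expect any serious obstacle here — this is a short deduction from the three preceding claims plus the structural properties of $\nu$. The one point that requires care is the bound $\card{h(z)}\lll_{\alpha,m}1$: since $h$ is only guaranteed to have $2$-norm $1$ (not $\infty$-norm $1$) in this normalization, one must invoke that $\nu_z$ puts mass at least $\alpha'(\alpha,m)>0$ on each atom to pass from the $L_2$ bound on $h$ to a pointwise bound, which costs only a constant factor depending on $\alpha$ and $m$ — harmless for the claimed $\ggg_{\alpha,m}$ conclusion. A second minor point is ensuring the constant $c$ in the standing assumption $\tau\leq c$ is chosen small enough (depending on $m$ and $\alpha$) to absorb the $\tau^{20m}$ error term into $\half\sqrt{\tau}$; this is automatic since $20m\geq 20 > \half$. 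Everything else is a direct chain of inequalities using Claims~\ref{claim:relaxed1} and~\ref{claim:relaxed2} together with the full-support fact $\Sigma'\times\Gamma_{{\sf dup}}\subseteq{\sf supp}(\nu_{x,y})$ and $\Sigma_{{\sf modest}}\subseteq\Sigma'$.
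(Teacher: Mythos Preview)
Your proposal is correct and takes essentially the same approach as the paper: use the full-support property $\Sigma'\times\Gamma_{{\sf dup}}\subseteq{\sf supp}(\nu_{x,y})$ (with $\Sigma_{{\sf modest}}\subseteq\Sigma'$) to connect any $y_0$ to elements of $\Sigma_{{\sf modest}}$, combine Claim~\ref{claim:relaxed2} with the pointwise bound $\card{h(z)}\leq (\alpha')^{-1/2}\norm{h}_2$ coming from the atom lower bound, and compare against Claim~\ref{claim:relaxed1}. The only cosmetic difference is that the paper extracts a single $x\in\Sigma_{{\sf modest}}$ with $\card{f(x)}\geq\sqrt{\tau}/2$ from Claim~\ref{claim:relaxed1} and derives a contradiction from that one point, whereas you bound $\card{f}$ on all of $\Sigma_{{\sf modest}}$ and then average; both arguments are equivalent.
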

\begin{proof}
   Assume this is not the case, so that there is $y^{\star}\in \Gamma_{{\sf dup}}$ such that $\card{g(y^{\star})}\leq \eta(\alpha,m)\sqrt{\tau'}$ where $\eta(\alpha,m)>0$
   is a small function of $\alpha$ and $m$.
   By Claim~\ref{claim:relaxed1} there is $\vec{x} = a(x,\ldots,x)\in \Sigma_{{\sf modest}}$ such that $\card{f(\vec{x})}\geq \sqrt{\tau}/2$ (where $x\in \Sigma$).
   As the support of $\mu'$ on $\Sigma\times \Gamma$ is full, there is $\vec{z}\in \tilde{\Phi}$ such that $(x,y^{\star},z)$ is in ${\sf supp}(\mu')$,
   hence $(\vec{x},y^{\star},\vec{z})$ is in the support of $\nu$, so by Claim~\ref{claim:relaxed2}
   \[
   \sqrt{\tau}/2
   \leq \card{f(\vec{x})}
   \leq \card{g(y^{\star})}\card{h(\vec{z},1)} + \tau^{20m}
   \leq \eta \sqrt{\tau} (\alpha')^{-1/2} + \tau^{20m}
   \]
   where we used the fact that $\card{h(\vec{z})}\leq (\alpha')^{-1/2}\norm{h}_2^2 = (\alpha')^{-1/2}$, where $\alpha' = \alpha'(\alpha,m)>0$
    is a lower bound on the probability of any atom in $\nu$. This is a contradiction for a small enough $\eta$.
\end{proof}

Consider the interval $(0,1)$, and in it define the intervals $I_j = [\tau^{3(j+1)}, \tau^{3j})$ for $j=0,1,\ldots$. We say an interval
	$I_j$ is free if it doesn't contain any point from either ${\sf Image}(f)$ or ${\sf Image}(h)$. Note that as the intervals $I_j$ are disjoint
	and each one of these sets has size at most $m$, we may find $j\in \set{0,1,\ldots,2m}$ such that $I_j$ is free, and we fix such $j$ henceforth.
	\begin{claim}\label{claim:relaxed4}
		We have that $\card{f(\vec{x})}\geq \tau^{3j+1.5}$ for all $\vec{x}\in \Sigma_{{\sf modest}}$ and
		$\card{h(\vec{z})}\geq \tau^{3j+1.5}$ for all $\vec{z}\in \tilde{\Phi}_{\sf dup}$.
	\end{claim}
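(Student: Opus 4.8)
The claim is a dichotomy-type lower bound: every symbol in $\Sigma_{{\sf modest}}$ has $|f|$ bounded below by $\tau^{3j+1.5}$, and every symbol of $\tilde\Phi_{\sf dup}$ has $|h|$ bounded below by the same quantity. The plan is to prove this by contradiction, using the ``free interval'' $I_j=[\tau^{3(j+1)},\tau^{3j})$ chosen just above the statement (which by construction contains no point of ${\sf Image}(f)\cup{\sf Image}(h)$), together with Claims~\ref{claim:relaxed1},~\ref{claim:relaxed2},~\ref{claim:relaxed3} and the pairwise connectedness / full-support properties of $\mu'$ and $\nu$.

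First I would record the key leverage: by Claim~\ref{claim:relaxed2}, for every $(x,y,z)\in{\sf supp}(\nu)$ we have $\big|\,\overline{f(x)}-g(y)h(z)\,\big|\le \tau^{20m}$, and by Claim~\ref{claim:relaxed3} we have $|g(y)|\gg_{\alpha,m}\sqrt\tau$ for all $y\in\Gamma_{\sf dup}$. So along any edge of $\nu$, $|f(x)|$ and $|g(y)||h(z)|$ differ by at most $\tau^{20m}$, hence $|h(z)|$ is pinned down to roughly $|f(x)|/|g(y)|$ up to a multiplicative error that is tiny compared to $\sqrt\tau$ (since $\tau^{20m}/\sqrt\tau$ is negligible). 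Now suppose for contradiction that some $\vec x_0\in\Sigma_{{\sf modest}}$ has $|f(\vec x_0)|<\tau^{3j+1.5}$. Since $\tau^{3j+1.5}<\tau^{3j}$ and $I_j$ is free, and since $|f(\vec x_0)|$ cannot lie in $I_j$, it must actually be that $|f(\vec x_0)|<\tau^{3(j+1)}$ — i.e.\ $|f(\vec x_0)|$ falls below the free interval, so it is ``very small''. I would then propagate this smallness: using that the support of $\mu'$ on $\Sigma\times\Gamma$ is full, for any $y\in\Gamma$ there is $\vec z$ with $(\vec x_0,y,\vec z)\in{\sf supp}(\nu)$, so $|h(\vec z)|\le (|f(\vec x_0)|+\tau^{20m})/|g(y)| \lll_{\alpha,m}\tau^{3(j+1)}/\sqrt\tau$, which is below $\tau^{3(j+1)}$ for $c$ small; since $I_j$ is free for $h$ as well, $|h(\vec z)|$ must be below $\tau^{3(j+1)}$. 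Then, walking one more step (using pairwise connectedness / full support so that $\vec z$ is joined to some $y'$ and then to some $\vec x'$), the bound $|f(\vec x')|\le |g(y')||h(\vec z)|+\tau^{20m}$ together with $|g(y')|\le (\alpha')^{-1/2}$ forces $|f(\vec x')|$ to be below the free interval too. Iterating this around the connected graph underlying $\nu$ (whose diameter is bounded in terms of $m$) shows that \emph{every} value of $f$ and $h$ lies below $I_j$, i.e.\ $\|f\|_\infty,\|h\|_\infty\lll_{\alpha,m}\tau^{3(j+1)-1/2}$, which is $o(1)$; but this contradicts $\|f\|_2=1$ (since $f$ is supported on an alphabet of size $\le m$ under a measure with atoms $\ge\alpha'$, so $\|f\|_\infty\ggg_{\alpha,m}1$). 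The same argument with the roles of the two statements swapped handles the case where some $\vec z\in\tilde\Phi_{\sf dup}$ has $|h(\vec z)|<\tau^{3j+1.5}$.

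The bookkeeping I must be careful with: the free interval only tells us there is a \emph{gap}, so each propagation step must push the relevant value strictly \emph{below} $\tau^{3(j+1)}$ (not merely below $\tau^{3j+1.5}$) in order to re-enter the induction; this is why the $\tau^{3j+1.5}$ threshold in the statement sits strictly inside $I_j$ and why one needs $\tau\le c(m,\alpha)$ so that the accumulated multiplicative losses ($(\alpha')^{-1/2}$ per step, $1/\sqrt\tau$ from dividing by $g$, and at most $O_m(1)$ steps) stay under control. Concretely, with diameter $D=D(m)$ of the graph of $\nu$, one gets at the end $\|f\|_\infty \le (\alpha')^{-D}\tau^{-D/2}\,\tau^{3(j+1)}$, and choosing $c$ small (recalling $j\le 2m$, so $3(j+1)\le 6m+3$, comfortably less than the $50m$-type exponents floating around) makes this $<\tfrac12(\alpha')^{1/2}$, contradicting $\|f\|_2=1$.

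The main obstacle is making the ``propagate smallness around the graph'' step fully rigorous while tracking the multiplicative constants, since $\nu$ lives on the duplicated alphabets $\tilde\Sigma_{\sf fin}\times\Gamma_{\sf dup}\times\tilde\Phi_{\sf dup}$ and one must use the right full-support facts (the full support of $\mu'$ on $\Sigma\times\Gamma$, which lifts to full support of $\nu$ on $\Sigma'\times\Gamma_{\sf dup}$ as established in the proof of Lemma~\ref{lem:nu_satisfies}, plus pairwise connectedness of $\nu$) to guarantee that from any $x$ one can reach any other $x'$ in a bounded number of steps alternating through $\Gamma_{\sf dup}$ and $\tilde\Phi_{\sf dup}$. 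I expect this to be a short argument once the connectivity bookkeeping is set up, essentially a discrete ``propagation of smallness'' along a bounded-diameter graph, mirroring the structure of Claims~\ref{claim:relaxed1}--\ref{claim:relaxed3}.
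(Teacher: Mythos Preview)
Your approach is correct and is essentially the same argument as the paper's, just phrased as a step-by-step propagation of smallness rather than as a partition argument. The paper defines indicators $A(x)=1_{|f(x)|\ge\tau^{3j+1.5}}$ and $B(z)=1_{|h(z)|\ge\tau^{3j+1.5}}$ and shows, using exactly the ingredients you identified (Claims~\ref{claim:relaxed2},~\ref{claim:relaxed3} and the free interval $I_j$), that $A(x)=B(z)$ along every edge of $\nu$; since $A$ is non-constant (it is $0$ somewhere by assumption and $1$ somewhere because $\|f\|_2=1$), this directly contradicts pairwise connectedness of $\nu$ without any walk or diameter bookkeeping.

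One small point in your write-up: your final bound $\|f\|_\infty\le(\alpha')^{-D}\tau^{-D/2}\tau^{3(j+1)}$ is overly pessimistic and in fact self-inconsistent with your own use of the free interval. The whole point of choosing $I_j$ free of values of $|f|,|h|$ is that at \emph{each} step the propagated bound (say $|h(z)|\le C\tau^{3(j+1)}<\tau^{3j}$) gets reset to $|h(z)|<\tau^{3(j+1)}$, so the constants do not accumulate; the end of the walk gives $\|f\|_\infty<\tau^{3(j+1)}<1$ directly, contradicting $\|f\|_\infty\ge\|f\|_2=1$. Once you note this reset, your diameter concerns evaporate and your argument becomes equivalent to the paper's two-case check ($A(x)=1\Rightarrow B(z)=1$ and $A(x)=0\Rightarrow B(z)=0$).
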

	\begin{proof}
		Assume otherwise, and define
		\[
		A(x) = 1_{\card{f(x)}\geq \tau^{3j+1.5}},
		\qquad
		B(z) = 1_{\card{h(z)}\geq \tau^{3j+1.5}}.
		\]
		Then by our assumption, at least one of $A,B$ is not identically $1$, say $A$ without loss of generality.
		Note that since $\norm{f}_{2}=1$, there is some $x$ such that $\card{f(x)}\geq 1>\tau$, so $a$ is also not constantly
		$0$, hence $A$ is not constant. We next show that
		$A(x) + B(z) = 0\pmod{2}$ for all $(x,y,z)\in {\sf supp}(\nu)$, hence conclude that $\nu$ is not pairwise connected
        in contradiction to the fact that $\mu$ is pairwise connected and Lemma~\ref{lem:path_keeps_connectedness}.
		
		\paragraph{The case that $A(x)=1$.} Suppose that $(x,y,z)\in {\sf supp}(\nu)$ are such that
		$A(x) = 1$. It follows from Claim~\ref{claim:relaxed2} that
		\[
		\card{g(y)h(z)}\geq \card{f(x)} - \tau^{10m}\geq \tau^{3j+1.5} - \tau^{20m},
		\]
		and as $\card{g(y)}\leq \alpha^{-1}\norm{g}_{2,\tilde{\nu}_y}=\alpha^{-1}$, it follows that
		\[
		\card{h(z)}\geq \frac{\tau^{3j+1.5} - \tau^{20m}}{1/\alpha}\geq \tau^{3j+2},
		\]
		where we used the fact that $\tau<c$ is small enough, and $j\leq 2m$. Thus, as $I_j$ is free, it follows that
		$\card{h(z)}\geq \tau^{3j}$, and so $B(z) = 1$.
		
		\paragraph{The case that $A(x)=0$.} Suppose that $(x,y,z)\in {\sf supp}(\tilde{\nu})$ are such that
		$A(x) = 0$. It follows from Claim~\ref{claim:relaxed2} that
		\[
		\card{g(y)h(z)}\leq \card{f(x)} + \tau^{20m}\leq \tau^{3j+1.5} + \tau^{20m},
		\]
		and as $\card{g(y)}\geq \tau$ by Claim~\ref{claim:relaxed3}, it follows that
		\[
		\card{h(z)}\leq \frac{\tau^{3j+1.5} + \tau^{20m}}{\tau}\leq \tau^{3j+0.4},
		\]
		where we used the fact that $\tau<c$ is small enough, and $j\leq 2m$. Thus, as $I_j$ is free, it follows that
		$\card{h(z)}< \tau^{3(j+1)}$, and so $B(z) = 0$.
	\end{proof}

	Claim~\ref{claim:relaxed4} implies that the functions $f,g,h$ never get the value $0$, and also that in the support of $\nu$,
    \[
        \card{\frac{\overline{f(x)}}{g(y)h(z)} - 1}
        =\frac{\card{\overline{f(x)}-g(y)h(z)}}{\card{g(y)h(z)}}
        \leq \frac{\tau^{20m}}{\Omega_{\alpha,m}(\sqrt{\tau})\tau^{3j+1.5}}
        \leq \tau^{10m}.
    \]
    This motivates the definition of an approximate Abelian embedding by taking the principal branch of the logarithm:
	\[
	f'(x) = \log(\overline{f(x)}),
	\qquad
	g'(y) = \log(g(y)),
	\qquad
	h'(z) = \log(h(z)).
	\]
    Define $d(a,b) = \min_{k\in\mathbb{Z}}\card{a-b-2\pi{\bf i}k}$ for $a,b\in\mathbb{C}$.
    Then in the support of $\nu$ we have:
	\begin{equation}\label{eq:unravel_compact3}
	d(f'(x), g'(y) + h'(z))
	=d\left(\log\left(\frac{\overline{f(x)}}{g(y)h(z)}\right),0\right)
	\leq
    \card{\frac{\overline{f(x)}}{g(y)h(z)} - 1}
    \leq \tau^{10m}.
	\end{equation}
    We now use the logic of Claim~\ref{claim:embedding_fns} (and the underlying application of Dirichlet's Approximation Theorem)
    to slightly change $f',g'$ and $h'$ to get a proper Abeling embedding.
	
    We work separately with the real and imaginary part of $f',g',h'$.
    Looking at $S = {\sf Image}({\sf Re}(f'))\cup {\sf Image}({\sf Re}(g')) \cup {\sf Image}({\sf Re}(h'))$,
	we have that $\card{S}\leq 3m$. We take $N=(\alpha\tau)^{-9m}$; from Dirichlet's Approximation Theorem it follows that there are
	$\sigma_{\sf Re}\colon \Sigma_{{\sf fin}}\to \mathbb{Z}$, $\gamma_{\sf Re}\colon \Gamma\to\mathbb{Z}$
	and $\phi_{\sf Re}\colon \Phi\to\mathbb{Z}$ such that for some integer $1\leq q\leq N$ we have
	\begin{align*}
	&\card{{\sf Re}(f'(x)) - \frac{\sigma_{\sf Re}(x)}{q}}\leq \frac{1}{q N^{1/\card{S}}},
	\qquad
	\card{{\sf Re}(g'(y)) - \frac{\gamma_{\sf Re}(y)}{q}}\leq \frac{1}{q N^{1/\card{S}}},\\
	&\qquad\qquad\qquad\qquad
	\card{{\sf Re}(h'(z)) - \frac{\phi_{\sf Re}(z)}{q}}\leq \frac{1}{q N^{1/\card{S}}}.
	\end{align*}
    Similarly we find $\sigma_{\sf Im}\colon\Sigma_{{\sf fin}}\to \mathbb{Z}$, $\gamma_{\sf Im}\colon \Gamma\to\mathbb{Z}$ and $\phi_{\sf Im}\colon \Phi\to\mathbb{Z}$ such that
	\begin{align*}
	&\card{{\sf Im}(f'(x)) - \frac{\sigma_{\sf Im}(x)}{q}}\leq \frac{1}{q N^{1/\card{S}}},
	\qquad
	\card{{\sf Im}(g'(y)) - \frac{\gamma_{\sf Im}(y)}{q}}\leq \frac{1}{q N^{1/\card{S}}},\\
	&\qquad\qquad\qquad\qquad\card{{\sf Im}(h'(x)) - \frac{\phi_{\sf Im}(z)}{q}}\leq \frac{1}{q N^{1/\card{S}}}.
	\end{align*}
    Define
    \[
    \tilde{\sigma}(x) = \sigma_{{\sf Re}}(x) + 2\pi{\bf i}\sigma_{{\sf Im}}(x),
    \qquad
    \tilde{\gamma}(y) = \gamma_{{\sf Re}}(y) + 2\pi{\bf i}\gamma_{{\sf Im}}(y),
    \qquad
    \tilde{\phi}(z) = \phi_{{\sf Re}}(yz) + 2\pi{\bf i}\phi_{{\sf Im}}(z).
    \]

	\begin{claim}\label{claim:relaxed_base_case_embed_2}
		$\tilde{\sigma}(x) - \tilde{\gamma}(y) - \tilde{\phi}(z) = 0\pmod{2\pi{\bf i}}$ for all $(x,y,z)\in {\sf supp}(\nu)$.
	\end{claim}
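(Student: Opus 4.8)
The plan is to reduce the claim to a single statement about the integer-valued ``real part'' maps and then run the standard Dirichlet-rounding argument that already appears in Lemma~\ref{lem:turn_infinite_to_finite} and Claim~\ref{claim:embedding_fns}. First I would record that, because $\sigma_{\sf Re},\sigma_{\sf Im}$ and the analogous maps for $\gamma$ and $\phi$ take integer values, the quantity $\tilde{\sigma}(x)-\tilde{\gamma}(y)-\tilde{\phi}(z)$ always lies in $\mathbb{Z}+2\pi{\bf i}\mathbb{Z}$, and its imaginary part is automatically an integer multiple of $2\pi$. Consequently $\tilde{\sigma}(x)-\tilde{\gamma}(y)-\tilde{\phi}(z)\equiv 0\pmod{2\pi{\bf i}}$ is \emph{equivalent} to the vanishing of its real part, i.e.\ to the integer identity $\sigma_{\sf Re}(x)-\gamma_{\sf Re}(y)-\phi_{\sf Re}(z)=0$ holding for every $(x,y,z)\in{\sf supp}(\nu)$; the imaginary Dirichlet estimates are not needed for this claim (they will be used later, to transfer structure back to $f,g,h$).

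To establish this integer identity, fix $(x,y,z)\in{\sf supp}(\nu)$. By~\eqref{eq:unravel_compact3} there is an integer $k$ with $\card{f'(x)-g'(y)-h'(z)-2\pi{\bf i}k}\leq \tau^{10m}$; taking real parts and using that $2\pi{\bf i}k$ is purely imaginary gives $\card{{\sf Re}(f'(x))-{\sf Re}(g'(y))-{\sf Re}(h'(z))}\leq \tau^{10m}$. Combining this with the three Dirichlet estimates $\card{{\sf Re}(f'(x))-\sigma_{\sf Re}(x)/q}\leq \tfrac{1}{qN^{1/\card{S}}}$, $\card{{\sf Re}(g'(y))-\gamma_{\sf Re}(y)/q}\leq \tfrac{1}{qN^{1/\card{S}}}$, $\card{{\sf Re}(h'(z))-\phi_{\sf Re}(z)/q}\leq \tfrac{1}{qN^{1/\card{S}}}$ by the triangle inequality, and then clearing the denominator $q$, I get
\[
\card{\sigma_{\sf Re}(x)-\gamma_{\sf Re}(y)-\phi_{\sf Re}(z)}\leq q\tau^{10m}+\frac{3}{N^{1/\card{S}}}.
\]
Since $\card{S}\leq 3m$ and $N=(\alpha\tau)^{-9m}$ we have $N^{1/\card{S}}\geq (\alpha\tau)^{-3}$, so the second term is at most $3(\alpha\tau)^{3}$; and since $1\leq q\leq N$ the first term is at most $\alpha^{-9m}\tau^{m}$. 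Hence for $c=c(m,\alpha)$ small enough (recall $\tau\le c$) the right-hand side is strictly below $1$. As the left-hand side is a non-negative integer, it must be $0$, which is exactly $\sigma_{\sf Re}(x)=\gamma_{\sf Re}(y)+\phi_{\sf Re}(z)$, and the claim follows.

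I do not anticipate a real obstacle here: the argument is a direct transcription of the Dirichlet step of Lemma~\ref{lem:turn_infinite_to_finite}. The only point requiring care is the bookkeeping of exponents — it is precisely to make $q\tau^{10m}+3N^{-1/\card{S}}<1$ that the earlier estimates were stated with the generous slacks $\tau^{10m}$ in~\eqref{eq:unravel_compact3} and $\tau^{40m}$ in Claim~\ref{claim:relaxed2}, and that $N$ was chosen as a large power of $(\alpha\tau)^{-1}$. While carrying this out I would also note, for later use in the proof of Lemma~\ref{lem:nu_satisfies}, that dividing the rounded maps by $q$ produces functions within $O(N^{-1/\card{S}})=O((\alpha\tau)^{3})$ of $f',g',h'$ in $\ell_\infty$, which is what lets one conclude (via injectivity of $\exp$ on a small neighbourhood, as in Claim~\ref{claim:embedding_fns}) that $f\in{\sf Embed}_{\sigma}(\nu)$ and thereby contradict the assumed variance of $f$ on $\Sigma_{{\sf modest}}$.
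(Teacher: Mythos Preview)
Your proof is correct and follows essentially the same approach as the paper's: combine the Dirichlet approximation bounds with~\eqref{eq:unravel_compact3} via the triangle inequality, clear the denominator $q$, and invoke integrality to force the expression to vanish. Your observation that the imaginary part of $\tilde{\sigma}(x)-\tilde{\gamma}(y)-\tilde{\phi}(z)$ is automatically in $2\pi\mathbb{Z}$ (so only the real part needs work) is a clean simplification of the paper's version, which handles both parts together through the distance function $d$.
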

	\begin{proof}
		From the choice of $\tilde{\sigma},\tilde{\gamma},\tilde{\phi}$ and~\eqref{eq:unravel_compact3} it follows that for all $(x,y,z)\in {\sf supp}(\nu)$ it holds that
		\[
		d\left(\frac{\tilde{\sigma}(x)}{q}, \frac{\tilde{\gamma}(y) + \tilde{\phi}(z)}{q}\right)
		\leq \frac{6}{q N^{1/\card{S}}} + d\left(f'(x), g'(y) + h'(z)\right)
        \leq \frac{6}{q N^{1/\card{S}}} + \tau^{10m}
		\]
        where we used the triangle inequality and~\eqref{eq:unravel_compact3}. Multiplying by $q$ we get that
        $d\left(\sigma(x), \gamma(y) + \phi(z)\right)\leq \frac{6}{N^{1/\card{S}}} + q\tau^{10m} < 1$,
        where we used the fact that $q\leq N$, the choice of $N$ and the fact that $\tau$ is sufficiently small.
        As $\tilde{\sigma}(x) - \tilde{\gamma}(y) - \tilde{\phi}(z)$ is a complex number of the form $a+b{\bf i}$ for $a,b\in \mathbb{Z}$,
        it must be the case that $a=0$ and $b$ is a multiple of $2\pi$, and the claim follows.
	\end{proof}
	
    Looking at the real parts of $\tilde{\sigma},\tilde{\gamma}$ and $\tilde{\phi}$, we get from Claim~\ref{claim:embedding_fns} that
    ${\sf Re}(\tilde{\sigma})\in {\sf Embed}_{\sigma}(\nu)$
    and looking at their imaginery part we get that they form an embedding into an infinite cyclic group, hence by Lemma~\ref{lem:turn_infinite_to_finite}
    they are equivalent to an embedding into a finite Abelian group, and again by Claim~\ref{claim:embedding_fns} we get that ${\sf Im}(\tilde{\sigma})\in {\sf Embed}_{\sigma}(\nu)$.
    It follows that $\tilde{\sigma}$ is in ${\sf Embed}_{\sigma}(\nu)$, hence $e^{\tilde{\sigma}}$ is in ${\sf Embed}_{\sigma}(\nu)$.
    Hence $e^{\tilde{\sigma}}$ is constant on $\Sigma_{{\sf modest}}$ and so
    \begin{align*}
    \Expect{x,x'\in\Sigma_{{\sf modest}}}{\card{f(x) - f(x')}^2}
    &=
    \Expect{x,x'\in\Sigma_{{\sf modest}}}{\card{(f(x) - e^{\tilde{\sigma}(x)/q}) - (f(x')-e^{\tilde{\sigma}(x')/q})}^2}\\
    &\leq
    4\Expect{x\in\Sigma_{{\sf modest}}}{\card{f(x) - e^{\tilde{\sigma}(x)/q}}^2}\\
    &=
    4\Expect{x\in\Sigma_{{\sf modest}}}{\card{f(x)}^2\card{e^{\tilde{\sigma}(x)/q - f'(x)}-1}^2}
    \end{align*}
    For each $x$, $\frac{\tilde{\sigma}(x)}{q} - f'(x)$ is a complex number $a+b{\bf i}$ for $a,b\in \mathbb{R}$ which are at most
    $\frac{1}{qN^{1/\card{S}}}$ in absolute value, hence
    \[
    \card{e^{\tilde{\sigma}(x)/q - f'(x)}-1}
    \leq \card{e^{a}{\sf cos}(b) - 1} + \card{e^a{\sf sin}(a)}
    \leq O\left(\frac{1}{qN^{1/\card{S}}}\right)
    \leq \tau^{8}.
    \]
    Thus, we get that
    \[
    \Expect{x,x'\in\Sigma_{{\sf modest}}}{\card{f(x) - f(x')}^2}
    \leq
    4\tau^{16} \Expect{x\in\Sigma_{{\sf modest}}}{\card{f(x)}^2}
    \leq 4\tau^{16}(\alpha')^{-1}\norm{f}_2^2
    \leq \tau^{15},
    \]
    and contradiction to the assumption that $\Expect{x,x'\in\Sigma_{{\sf modest}}}{\card{f(x) - f(x')}^2}\geq \tau$.

    \subsubsection{Proof of Lemma~\ref{lem:nu_satisfies}: $\nu$ Satisfies the Other Conditions of Theorem~\ref{thm:nonembed_deg_must_be_small_rephrase_maximal_relaxed}}
    \label{sec:the_other_conditions_of_nu}
    The first, second, third and seventh properties of $\nu$ are all direct by the construction.
    We have already argued about the fourth and fifth properties of $\nu$ in the beginning of the proof of Lemma~\ref{lem:nu_satisfies}.
    Property 6a was established in Section~\ref{sec:relaxed_base_quan}, and now we only need to verify property 6b.
    For that we describe a distribution $\tilde{\nu}$ satisfying the properties therein.

    Take $\Sigma'' = \sett{(x,\ldots,x)}{x\in \Sigma}$, recall that $\Gamma' = \sett{(y,1)}{y\in\Gamma}$ and let $\tilde{\nu}$ be the distribution
    of
    \[
    ((x,\ldots,x), (y,1), ((z,\ldots,z),1))
    \]
    where $(x,y,z)\sim \mu$. The map $a$ that we take is the same as the map $a$ we defined to
    make the $x$-merge. Property 6b(i) is clear by the definition of $\Sigma'$ and $\Sigma''$, and property 6b(ii) is clear
    by the properties of the path trick. Property 6b(iv) follows as in $\mu$ the value of any two coordinates implies the last one,
    and property 6b(v) is immediate by the construction. Property 6b(vi) was verified above in the body of the proof of Lemma~\ref{lem:nu_satisfies}.
    We now argue about maximality, property 6b(iii).

    Suppose we have
    $f\colon \Sigma''^{n}\to\mathbb{C}$, $g\colon {\Gamma'}^n\to\mathbb{C}$ and $h\colon \Phi'^n\to \mathbb{C}$ that are $1$-bounded, and define
    $f'\colon \Sigma^n\to\mathbb{C}$, $g'\colon \Gamma^{n}\to\mathbb{C}$ and $h'\colon \Phi^n\to\mathbb{C}$ by
    \begin{align*}
    &f'(x_1,\ldots,x_n) = f((x_1,\ldots,x_1),\ldots, (x_n,\ldots,x_n)),
    \qquad
    g'(y_1,\ldots,y_n) = g((y_1,1),\ldots,(y_n,1)),\\
    &\qquad\qquad\qquad\qquad
    h'(z_1,\ldots,z_n) = h(((z_1,\ldots,z_1),1),\ldots,((z_n,\ldots,z_n),1)).
    \end{align*}
    Then
    \[
    \Expect{(\vec{x}, y, \vec{z})\sim \tilde{\nu}^{\otimes n}}{f(\vec{x})g(y)h(\vec{z})}
    =\Expect{(x,y,z)\sim \mu^{\otimes n}}{f'(x)g'(y)h'(z)}.
    \]
    The point is now that if we take a distribution $\mathcal{D}$ over $\Sigma''\times \Gamma' \times \Phi'$ whose support strictly contains
    the support of $\tilde{\nu}$, then we could repeat a similar reasoning above and relate our expectation to an expectation with respect to a
    distribution $\tilde{\mathcal{D}}$ whose support strictly contains $\mu$. Indeed, let $\mathcal{D}$ be such distribution and define $\tilde{\mathcal{D}}$ by taking
    $(\vec{x}, (y,1), (\vec{z},1))\sim \mathcal{D}$ and outputting $(x,y,z)$. Clearly the support of $\tilde{\mathcal{D}}$ contains the support
    of $\tilde{\mu}$, and hence the support of $\mu$. If it was the case that ${\sf supp}(\tilde{\mathcal{D}}) = {\sf supp}(\mu)$, then
    we would get that
    \[
    {\sf supp}(\mathcal{D})\subseteq \sett{(\vec{x}, (y,1), (\vec{z},1))}{\exists x\in\Sigma, z\in\Phi, \vec{x} = (x,\ldots,x), \vec{z} = (z,\ldots,z), (x,y,z)\in {\sf supp}(\mu)},
    \]
    but this is contained in the support of $\tilde{\nu}$ in contradiction. Thus we get that taking $f'$, $g'$ and $h'$ as above we have that
    \[
    \Expect{(\vec{x}, y, \vec{z})\sim \mathcal{D}^{\otimes n}}{f(\vec{x})g(y)h(\vec{z})}
    =\Expect{(x,y,z)\sim \tilde{\mathcal{D}}^{\otimes n}}{f'(x)g'(y)h'(z)},
    \]
    and $\tilde{\mathcal{D}}$ is a distribution whose support strictly contains ${\sf supp}(\mu)$.
    If ${\sf NEStab}_{1-\delta}(g; \mathcal{D})\leq \delta$, then by Lemma~\ref{lem:compare_averaging_ops}
    we get that ${\sf NEStab}_{1-\delta}(g'; \tilde{\mathcal{D}})\leq \delta^s$, where $s=s(m,\alpha)>0$
    (as the master embedding of $y$ in $\mathcal{D}$ is a refinement of the master embedding of $y$ in $\tilde{\mathcal{D}}$,
    we get that the corresponding averaging operators satisfy the properties of the lemma).
    Thus, by Claim~\ref{claim:increase_noise_decrease_stab} we get that
    ${\sf NEStab}_{1-\delta^s}(g; \tilde{\mathcal{D}})\leq \delta^s$, and as $\mu$ is maximal
    we get that
    \[
        \card{\Expect{(x,y,z)\sim \tilde{\mathcal{D}}^{\otimes n}}{f'(x)g(y)h'(z)}}\leq M\delta^{s\eta}
    \]
    for some $M,\eta$ depending only on $m,\alpha$, concluding the proof.\qed
\subsection{The Case that $\card{\Sigma_{{\sf modest}}} = 1$: Proof of Lemma~\ref{lem:ugly_modest_case}}\label{sec:ugly_modest_case}
Let $f,g,h$ be functions as in Theorem~\ref{thm:nonembed_deg_must_be_small_rephrase_maximal2}. Using Lemma~\ref{lem:from_mu_to_path} repeatedly,
we get that there is $\ell = O_{m,\alpha}(1)$ such that
\[
\card{\Expect{(x,y,z)\sim\mu^{\otimes n}}{f(x)g(y)h(z)}}^{\ell}
\leq \card{\Expect{(x,y,z)\sim \mu''^{\otimes n}}{F(x)g(y)H(z)}},
\]
where $F\colon \tilde{\Sigma}^n\to \mathbb{C}$ and $H\colon \tilde{\Phi}^{n}\to\mathbb{C}$ are some $1$-bounded functions (and $g$ remains the same). Our goal is to show that
this quantity is at most $M\delta^{\eta}$ where $M\in\mathbb{N}$ and $\eta>0$ depend only on $m,\alpha$. By Lemma~\ref{lem:merge}, this would follow if we
show the statement for the $x$-merge distribution of $\mu''$, which is nothing but $\nu'$, hence it suffices to show that there are $M$ and $\eta>0$
such that for $F\colon\tilde{\Sigma}_{{\sf final}}\to[-1,1]$, $g\colon \Gamma^n\to[-1,1]$ and $h\colon \tilde{\Phi}^n\to[-1,1]$ such that
${\sf NEStab}_{1-\delta}[g;\nu_y^{\otimes n}]\leq \delta$ it holds that
\[
\card{\Expect{(\tilde{x},y,\tilde{z})\sim {\nu'}^{\otimes n}}{F(\tilde{x})g(y)H(\tilde{z})}}\leq M\delta^{\eta}.
\]
We focus on this task henceforth. Recall $\Sigma' = \sett{a(x,\ldots,x)}{x\in \Sigma}$ and define
$\Phi'_{{\sf pre}} = \sett{(z,\ldots,z)}{z\in \Phi}$ as well as $\Sigma'_{{\sf pre}} = \sett{(x,\ldots,x)}{x\in \Sigma}$.
Let $\tilde{\nu}$ be the distribution of $(\tilde{x},y,\tilde{z})\sim \nu$ conditioned on $\tilde{x}\in \Sigma'$ and $\tilde{z}\in \Phi'$,
and write $\nu' = \beta\tilde{\nu} + (1-\beta)\nu''$ where $\beta > 0$ depends only on $\alpha$ and $\nu''$ is some distribution. Choose
$J\subseteq_{\beta} [n]$, $(\tilde{x}',y',\tilde{z}')\sim \nu''^{\overline{J}}$
and define $\tilde{F}\colon \Sigma'^{J}\to\mathbb{C}$, $\tilde{g}\colon \Gamma^{J}\to\mathbb{C}$ and $\tilde{H}\colon \tilde{\Phi}^{J}\to\mathbb{C}$ by
\[
\tilde{F} = F_{\overline{J}\rightarrow \tilde{x}'},
\qquad
\tilde{g} = g_{\overline{J}\rightarrow y'},
\qquad
\tilde{H} = H_{\overline{J}\rightarrow \tilde{z}'}.
\]
Denote $\phi_{\nu'}(F,g,H) = \Expect{(\tilde{x},y,\tilde{z})\sim {\nu'}^{\otimes n}}{F(\tilde{x})g(y)H(\tilde{z})}$ and analogously define
$\phi_{\tilde{\nu}}(\tilde{F},\tilde{g},\tilde{H})$. Then
\begin{equation}\label{eq:handle_ugly_modest_case2}
\phi_{\nu'}(F,g,H)
=\hspace{-2ex}\Expect{(\tilde{x}',y',\tilde{z}')\sim \nu''^{\overline{J}}}{\phi_{\tilde{\nu}}(\tilde{F}, \tilde{g},\tilde{H})}
=\hspace{-2ex}
\underbrace{\Expect{(\tilde{x}',y',\tilde{z}')\sim \nu''^{\overline{J}}}{1_E\phi_{\tilde{\nu}}(\tilde{F}, \tilde{g},\tilde{H})}}_{(\rom{1})}
+
\underbrace{\Expect{(\tilde{x}',y',\tilde{z}')\sim \nu''^{\overline{J}}}{1_{\overline{E}}\phi_{\tilde{\nu}}(\tilde{F}, \tilde{g},\tilde{H})}}_{(\rom{2})},
\end{equation}
where $E$ is the event that ${\sf NEStab}_{1-c^{-1}\beta^{-1}\delta,\nu'}(\tilde{g};\tilde{\nu}_y^{J})\leq \sqrt{\delta}$; here,
$c>0$ is from Claim~\ref{claim:rr_nestab} (with $\mathcal{D} = {\nu'}_y$, $\mathcal{D}' = \tilde{\nu}_y$ and $\mathcal{D}'' = {\nu''}_y$ in the notations therein).
By Claim~\ref{claim:rr_nestab} we have
\[
\Expect{J,y'}{{\sf NEStab}_{1-c^{-1}\beta^{-1}\delta,\nu'}(\tilde{g};{\tilde{\nu}}_y^{J})}
\leq
{\sf NEStab}_{1-\delta, \nu'}(g;{\nu'}_y^{\otimes n})
\leq
\delta,
\]
so by Markov's inequality $\Prob{}{\overline{E}}\leq \sqrt{\delta}$. Hence, as $\tilde{F},\tilde{g}$ and $\tilde{H}$ are $1$-bounded,
we have that $\card{(\rom{2})}\leq \Prob{}{\overline{E}}\leq \sqrt{\delta}$. In the rest of the argument we bound $(\rom{1})$.

\paragraph{Bounding $(\rom{1})$.} Fix $J$ and $\tilde{x}',y',\tilde{z}'$ so that the event $E$ holds;
we show that then $\card{\phi_{\tilde{\nu}}(\tilde{F}, \tilde{g},\tilde{H})}\leq M\delta^{\eta}$, where $M\in\mathbb{N}$ and $\eta>0$
only depend on $m$ and $\alpha$. The main idea is to re-interpret $\phi_{\tilde{\nu}}(\tilde{F}, \tilde{g},\tilde{H})$ as an expectation
with respect to a distribution over $\Sigma\times \Gamma\times \Phi$ whose support strictly contains the support of $\mu$, and
then use the maximality of $\mu$.

Without loss of generality, we assume for notational convenience that $J = \{1,\ldots,n'\}$.
Consider the functions $F^{\sharp}\colon \Sigma^{J}\to\mathbb{C}$ and $H^{\sharp}\colon \Phi^{J}\to\mathbb{C}$
defined as
\[
F^{\sharp}(x) = \tilde{F}(a(x_1,\ldots,x_1),\ldots,a(x_{n'},\ldots,x_{n'})),
\qquad
H^{\sharp}(z) = \tilde{H}((z_1,\ldots,z_1),\ldots,(z_{n'},\ldots,z_{n'})).
\]
Consider the distribution $\mathcal{D}$ over $\Sigma\times \Gamma\times \Phi$ that results from sampling $(X, y, Z)$ according to
$\tilde{\nu}$, writing $Z = (z,\ldots,z)$ and sampling $(x,\ldots,x)\sim \mu''|_{\tilde{\Sigma}}$ conditioned on $a(x,\ldots,x) = X$,
and then outputting $(x,y,z)$. Then:
\begin{equation}\label{eq:handle_ugly_modest_case1}
\phi_{\tilde{\nu}}(\tilde{F}, \tilde{g},\tilde{H})
=\Expect{(x,y,z)\sim\mathcal{D}^{J}}{F^{\sharp}(x)\tilde{g}(y)H^{\sharp}(z)}.
\end{equation}
It is easily seen that the probability of each atom in $\mathcal{D}$ is at least $\alpha'>0$ that depends only on $m$ and $\alpha$.
We argue that ${\sf supp}(\mu)$ is strictly contained in ${\sf supp}(\mathcal{D})$. First, containment is clear by construction, and we next observe
the strict containment. In $\mu$ we have that $y,z$ implies $x$, and in the construction of $\nu$ we first took distinct $x^{\star}, {x^{\star}}'\in \Sigma$
(satisfying some other property that is not important for now), and took $\Sigma_{{\sf modest}} = \set{a(x^{\star},\ldots,x^{\star}),a({x^{\star}}',\ldots,{x^{\star}}')}$.
Thus, there are $y,y'\in\Gamma$ and $z,z'\in\Phi$ such that $(x^{\star},y,z),({x^{\star}}',y',z')\in {\sf supp}(\mu)$, and it must be the case that $(x^{\star},y',z')$ is not in
${\sf supp}(\mu)$ (otherwise, this would mean that the value of the $y$ and $z$ coordinate does not imply the value of the $x$ coordinate in $\mu$).
We argue that $(x^{\star},y',z')$ is in ${\sf supp}(\mathcal{D})$. Indeed, as by assumption $\card{\Sigma_{{\sf modest}}}=1$ we have
that $a(x^{\star},\ldots,x^{\star}) = a({x^{\star}}',\ldots,{x^{\star}}')$, and we have that for $X = a({x^{\star}}',\ldots,{x^{\star}}')$,
$Z = (z',\ldots,z')$ it holds that $(X,y',Z)$ is in ${\sf supp}(\nu)$, hence by definition of $\mathcal{D}$ and the fact that $a(x^{\star},\ldots,x^{\star}) = X$
we get that $(x^{\star}, y', z')$ is in ${\sf supp}(\mathcal{D})$.

In conclusion, we get that the expectation on the right hand side of~\eqref{eq:handle_ugly_modest_case1} is an expectation with respect to a distribution $\mathcal{D}$
whose support strictly contains the support of $\mu$ and hence we can use the maximality of $\mu$. One subtle point is that the ``high degreeness'' of $g$ is not phrased
in quite the appropriate language; as the event $E$ holds we know that ${\sf NEStab}_{1-c^{-1}\beta^{-1}\delta,\nu'}(\tilde{g};\tilde{\nu}_y^{J})\leq \sqrt{\delta}$ and we
need to conclude from this high-degreeness with respect to non-embeddability in $\mathcal{D}$.

As $\tilde{\nu}_y = \mathcal{D}_y$ we have
${\sf NEStab}_{1-c^{-1}\beta^{-1}\delta,\nu'}(\tilde{g};\tilde{\nu}_y^{J}) = {\sf NEStab}_{1-c^{-1}\beta^{-1}\delta,\nu'}(\tilde{g};\mathcal{D}_y^{J})$. Next, note
that any Abelian embedding of $\nu'$ can be used to define an Abelian embedding of $\mathcal{D}$. This is done by mapping $x$ to $a(x,\ldots,x)$ and then applying the embedding of $\nu'$
on the first coordinate, applying the embedding of $\nu'$ on the second coordinate on $y$, and mapping $z$ to $(z,\ldots,z)$ and then applying the embedding of
$\nu'$ on the third coordinate. Thus, the partition of $\Gamma$ defined by master embedding of $\nu'$ is a refinement of the partition of $\Gamma$ defined by the master embedding of $\mathcal{D}$.
Applying Lemma~\ref{lem:compare_averaging_ops} we conclude that
\[
{\sf NEStab}_{1-c^{-1}\beta^{-1}\delta,\mathcal{D}}(\tilde{g};\mathcal{D}_y^{J})\leq {\sf NEStab}_{1-c^{-1}\beta^{-1}\delta,\nu'}(\tilde{g};\mathcal{D}_y^{J})^s
\leq \delta^{s/2}.
\]
where $s = s(m,\alpha)>0$. By Claim~\ref{claim:increase_noise_decrease_stab} we get that
${\sf NEStab}_{1-\delta^{s/2},\mathcal{D}}(\tilde{g};\mathcal{D}_y^{J})\leq \delta^{s/2}$
provided that $\delta_0$ is small enough, hence by maximality of $\mu$ it follows that
\[
\card{\Expect{(x,y,z)\sim\mathcal{D}^{J}}{F^{\sharp}(x)\tilde{g}(y)H^{\sharp}(z)}}\leq M'\delta^{s\eta'/2}
\]
where $M'$ and $\eta'>0$ depends only on $\alpha$ and $m$. Plugging this into~\eqref{eq:handle_ugly_modest_case1} we have that
$\card{\phi_{\nu'}(\tilde{F}, \tilde{g},\tilde{H})}\leq \delta^{s\eta'/2}$, and so $\card{(\rom{1})}\leq \delta^{s\eta'/2}$.

\paragraph{Combining the bounds on $(\rom{1}), (\rom{2})$.}
Plugging the bounds on $(\rom{1})$, $(\rom{2})$ into~\eqref{eq:handle_ugly_modest_case2} yields that
\[
\card{\phi_{\nu}(F,g,H)}\leq M'\delta^{s\eta'/2} + \sqrt{\delta}\leq M''\delta^{\eta''}
\]
for $M'' = M'+1$ and $\eta'' = s\eta'/2$, and we are done.\qed

\section{Reducing to the Homogenous Statement}\label{sec:reudce_to_homogenous}
In this section, our goal is to reduce Theorem~\ref{thm:nonembed_deg_must_be_small_rephrase_maximal_relaxed} into a
result that relaxes the assumption that our functions are $1$-bounded to the assumption that they are bounded $2$-norm
(which is therefore more amendable to a proof by induction). See Theorem~\ref{thm:nonembed_homogenous} for a precise statement.

\subsection{Degree, Non-embedding degree and Effective Non-embedding Degree}
Recall that in Section~\ref{sec:partial_basis} we showed that given a distribution $\mu$ that has saturated master embeddings,
one may construct a basis $B_1\cup B_2$ for $L_2(\Sigma; \mu_x)$ in which $B_1$ consists of embedding functions
and $B_2$ consists of functions that are orthogonal to all embeddings functions. The goal of this section is to refine this
further so as to be more compatible with the relaxed base case.

\begin{definition}
  Let $\mu$ be a distribution over $\Sigma\times \Gamma\times \Phi$ as in Theorem~\ref{thm:nonembed_deg_must_be_small_rephrase_maximal_relaxed}.
  We define an orthonormal basis $B_{{\sf embed}}\cup B_{{\sf non-embed}}\cup B_{{\sf modest}}$ for $L_2(\Sigma;\mu_x)$, as follows:
  \begin{enumerate}
    \item Consider the space of embedding functions, ${\sf Embed}_{\sigma}(\mu) = {\sf span}\left(\sett{\chi\circ \sigma}{\chi\in\widehat{H}}\right)$,
    and pick an orthonormal basis $B_{{\sf embed}} = \sett{\chi\circ \sigma}{\chi\in\widehat{H}}$ for it.

    \item Consider the orthogonal space to ${\sf Embed}_{\sigma}(\mu)$, namely ${\sf Embed}_{\sigma}(\mu)^{\perp}$, and consider the subspace of
    it of functions that are constant on $\Sigma_{{\sf modest}}$:
    \[
        \nenm_{\sigma}(\mu) = \sett{f\colon \Sigma\to\mathbb{C}}{f\in {\sf Embed}_{\sigma}(\mu)^{\perp}, f|_{\Sigma_{{\sf modest}}}\text{ is constant}}.
    \]
    Pick $B_{{\sf non-embed}}$ an orthonormal basis of $\nenm_{\sigma}(\mu)$.
    \item Consider the space $\nenm_{\sigma}(\mu)^{\perp}\cap {\sf Embed}_{\sigma}(\mu)^{\perp}$, and take $B_{{\sf modest}}$ to be an orthonormal basis
    for it.
  \end{enumerate}
\end{definition}

With the basis $B_{{\sf embed}}\cup B_{{\sf non-embed}}\cup B_{{\sf modest}}$ in hand, we can now construct an orthonormal basis for
$L_2(\Sigma^{n}, \mu_x^{\otimes n})$ by tensorizing. Namely, we take $(B_{{\sf embed}}\cup B_{{\sf non-embed}}\cup B_{{\sf modest}})^{\otimes n}$,
as an orthonormal basis of $L_2(\Sigma^{n}, \mu_x^{\otimes n})$, and thus we may write any $f\colon \Sigma^n\to\mathbb{C}$ as
\[
f(x) = \sum\limits_{\chi\in (B_{{\sf embed}}\cup B_{{\sf non-embed}}\cup B_{{\sf modest}})^{\otimes n}}{\widehat{f}(\chi)\prod\limits_{i=1}^{n} \chi_i(x_i)},
\qquad
\text{where }
\widehat{f}(\chi) = \inner{f}{\chi}.
\]
\begin{definition}
  A function $\chi$ of the form $\chi(x) = \prod\limits_{i=1}^{n}\chi_i(x_i)$ where $\chi_i \in B_{{\sf embed}}\cup B_{{\sf non-embed}}\cup B_{{\sf modest}}$
  for all $i$ is called a monomial.
\end{definition}

There are also several important notions of degree that may be associated with monomials, which extend Definition~\ref{def:degrees_early}.
\begin{definition}
  Let $\chi = \prod\limits_{i=1}^{n} \chi_i$ be a function in $(B_{{\sf embed}}\cup B_{{\sf non-embed}}\cup B_{{\sf modest}})^{\otimes n}$.
  \begin{enumerate}
    \item For a character $a\in \hat{H}$, the $a$-embedding degree of $\chi$, denoted by ${\sf embeddeg}_a(\chi)$ is the number of coordinates $i$ on which $\chi_i = a$.
    \item The embedding degree of $\chi$, denoted by ${\sf embeddeg}(\chi)$, is the number of coordinates $i$ on which $\chi_{i}\in B_{{\sf embed}}$.
    In other words, ${\sf embeddeg}(\chi) = \sum\limits_{a}{\sf embeddeg}_a(\chi)$.
    \item The non-embedding degree of $\chi$, denoted by ${\sf nedeg}(\chi)$, is the number of $i$ on which $\chi_i\not\in B_{{\sf embed}}$, that is,
    $\card{\sett{i\in [n]}{\chi_i\not\in B_{{\sf embed}}}}$.
    \item The effective non-embeding degree of $\chi$, denoted by $\effnon(\chi)$, is the number of $i$ on which $\chi_{i}\in B_{{\sf modest}}$, that is,
    $\card{\sett{i\in [n]}{\chi_i\in B_{{\sf modest}}}}$.
  \end{enumerate}
\end{definition}
We note that clearly, for every monomial $\chi$ it holds that $\effnon(\chi)\leq {\sf nedeg}(\chi)\leq  n$,
and that ${\sf nedeg}(\chi) + {\sf embeddeg}(\chi) = n$.

\subsubsection{The Modest Markov Chain}
Recalling the non-embedding
noise operator $\mathrm{T}_{\text{non-embed}, 1-\delta}^{\otimes n}\colon L_2(\Sigma^n, \mu_x^{\otimes n})\to L_2(\Sigma^n, \mu_x^{\otimes n})$, we have by
Fact~\ref{fact:soft_nonbembed_op} that $\mathrm{T}_{{\sf non-embed}, 1-\delta}^{\otimes n}\chi = (1-\delta)^{{\sf nedeg}(\chi)}\chi$. Next, we design the
\emph{effective non-embedding} noise operator, which will be helpful for us later on. Towards this end, we define the modest Markov chain associated with
$\mu$ on $x$.

\begin{definition}
   The modest Markov chain, on $\Sigma$, denoted by ${\sf Modest}$, is the Markov chain that on $x\in \Sigma$ takes $x' = x$ if $x\not\in \Sigma_{{\sf modest}}$,
   and otherwise, if $x\in \Sigma_{{\sf modest}}$, samples $x'\sim \mu_x$ conditioned on $x'\in \Sigma_{{\sf modest}}$.
\end{definition}

\subsubsection{The Effective Noise Operator and Effective Non-embedding Degree}
The modest Markov chain will be useful for us to define several notions. The first of which is the effective noise operator, and we
first define the $\mathrm{E}_{\text{non-embed}, 1-\delta}$.
\begin{definition}
  For $\delta>0$, the Markov chain $\mathrm{E}_{\text{non-embed}, 1-\delta}$ on $\Sigma$ is the Markov chain that on $x\in \Sigma$, takes
  $x' = x$ with probability $1-\delta$, and otherwise samples a neighbour $x'$ of $x$ according to the modest Markov chain.
\end{definition}
Clearly, $\mu_x$ is a stationary distribution of $\mathrm{E}_{\text{non-embed}, 1-\delta}$, and as usual we associated this Markov chain
an averaging operator acting. Abusing notation, we denote it by $\mathrm{E}_{\text{non-embed}, 1-\delta}\colon L_2(\Sigma; \mu_x)\to L_2(\Sigma; \mu_x)$
and define it as
\[
    \mathrm{E}_{\text{non-embed}, 1-\delta} f(x) = \Expect{x'\sim \mathrm{E}_{\text{non-embed}, 1-\delta} x}{f(x')}.
\]
The following lemma gives the most basic properties of $\mathrm{E}_{\text{non-embed}, 1-\delta}$. In words, any function in
$B_{{\sf embed}}\cup B_{{\sf non-embed}}$ is an eigenfunction of it with eigenvalue $1$ and any function in $B_{{\sf modest}}$
is an eigenfunction with eigenvalue $1-\delta$.
\begin{lemma}\label{lem:effective_noise}
  For all $\delta>0$ we have that for any $\chi\in B_{{\sf embed}}\cup B_{{\sf non-embed}}\cup B_{{\sf modest}}$,
  \begin{enumerate}
    \item If $\chi\in B_{{\sf modest}}$, then $\mathrm{E}_{\text{non-embed}, 1-\delta} \chi = (1-\delta)\chi$.
    \item Else, $\mathrm{E}_{\text{non-embed}, 1-\delta} \chi = \chi$.
  \end{enumerate}
  Consequently, for a monomial $\chi\colon \Sigma^n\to\mathbb{C}$ we have
  $\mathrm{E}_{\text{non-embed}, 1-\delta}^{\otimes n}\chi = (1-\delta)^{\effnon(\chi)}\chi$.
\end{lemma}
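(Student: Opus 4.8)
The plan is to prove Lemma~\ref{lem:effective_noise} directly, by reducing everything to the one-dimensional case and then checking each of the three types of basis functions separately. First I would observe that the consequence about monomials follows immediately from the one-dimensional statement: since $\mathrm{E}_{\text{non-embed}, 1-\delta}^{\otimes n}$ acts coordinate-wise, if $\chi = \prod_{i=1}^n \chi_i$ with each $\chi_i \in B_{{\sf embed}}\cup B_{{\sf non-embed}}\cup B_{{\sf modest}}$, then $\mathrm{E}_{\text{non-embed}, 1-\delta}^{\otimes n}\chi = \prod_{i=1}^n \mathrm{E}_{\text{non-embed}, 1-\delta}\chi_i$, and the scalar picked up is $(1-\delta)$ for each coordinate with $\chi_i \in B_{{\sf modest}}$ and $1$ otherwise, giving exactly $(1-\delta)^{\effnon(\chi)}$. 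So the whole lemma reduces to items (1) and (2) in the one-variable setting.

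For the one-dimensional statement, I would unpack the definition of the averaging operator. For any $f\colon\Sigma\to\mathbb{C}$ and any $x\in\Sigma$,
\[
\mathrm{E}_{\text{non-embed}, 1-\delta}f(x) = (1-\delta)f(x) + \delta\cdot \Expect{x'\sim {\sf Modest}\, x}{f(x')}.
\]
If $x\notin\Sigma_{{\sf modest}}$, the modest Markov chain stays put, so $\Expect{x'\sim {\sf Modest}\, x}{f(x')} = f(x)$ and hence $\mathrm{E}_{\text{non-embed}, 1-\delta}f(x) = f(x)$ for every $f$. So the only thing that matters is the behavior on $\Sigma_{{\sf modest}}$, where the chain re-samples $x'\sim\mu_x$ conditioned on $x'\in\Sigma_{{\sf modest}}$; call this conditional distribution $\mu_x|_{\Sigma_{{\sf modest}}}$. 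Thus for $x\in\Sigma_{{\sf modest}}$ we get $\mathrm{E}_{\text{non-embed}, 1-\delta}f(x) = (1-\delta)f(x) + \delta\, \Expect{x'\sim \mu_x|_{\Sigma_{{\sf modest}}}}{f(x')}$.

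Now I split into the three cases. If $\chi\in B_{{\sf embed}}$ or $\chi\in B_{{\sf non-embed}}$, then by construction $\chi$ is constant on $\Sigma_{{\sf modest}}$ (this is the defining property of $B_{{\sf embed}}$ via $\chi\circ\sigma$ together with item (1) of the relaxed base case definition, which forces $\sigma$ to be constant on $\Sigma_{{\sf modest}}$, and is the defining property of $\nenm_{\sigma}(\mu)$ for $B_{{\sf non-embed}}$). Hence on $\Sigma_{{\sf modest}}$ the conditional average equals $\chi(x)$, so $\mathrm{E}_{\text{non-embed}, 1-\delta}\chi(x) = (1-\delta)\chi(x) + \delta\chi(x) = \chi(x)$; combined with the fact that $\mathrm{E}_{\text{non-embed}, 1-\delta}$ fixes everything off $\Sigma_{{\sf modest}}$, we get $\mathrm{E}_{\text{non-embed}, 1-\delta}\chi = \chi$, which is item (2). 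If $\chi\in B_{{\sf modest}}$, then $\chi \in \nenm_{\sigma}(\mu)^{\perp}\cap {\sf Embed}_{\sigma}(\mu)^{\perp}$; I would argue that the conditional average $\Expect{x'\sim \mu_x|_{\Sigma_{{\sf modest}}}}{\chi(x')}$ is $0$. Indeed, the constant-$1$ function on $\Sigma$ lies in ${\sf Embed}_{\sigma}(\mu)$ (it is $\chi_0\circ\sigma$ for the trivial character), and more generally the indicator of $\Sigma_{{\sf modest}}$, suitably normalized, is an element of $\nenm_{\sigma}(\mu)$ (it is orthogonal to ${\sf Embed}_{\sigma}(\mu)$ after subtracting its projection, and it is constant on $\Sigma_{{\sf modest}}$); since $\chi\in B_{{\sf modest}}$ is orthogonal to both ${\sf Embed}_{\sigma}(\mu)$ and $\nenm_{\sigma}(\mu)$, the weighted sum $\sum_{x'\in\Sigma_{{\sf modest}}}\mu_x(x')\chi(x')$ vanishes, which is exactly $\Expect{x'\sim\mu_x|_{\Sigma_{{\sf modest}}}}{\chi(x')}$ up to the positive normalizing constant $\Pr[\Sigma_{{\sf modest}}]^{-1}$. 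Hence for $x\in\Sigma_{{\sf modest}}$, $\mathrm{E}_{\text{non-embed}, 1-\delta}\chi(x) = (1-\delta)\chi(x)$, and again off $\Sigma_{{\sf modest}}$ the value of $\chi$ is unchanged — but wait, we need $\chi(x)=0$ there too, which would follow because any function orthogonal to all of $\nenm_{\sigma}(\mu)$ and ${\sf Embed}_{\sigma}(\mu)$ must be supported on $\Sigma_{{\sf modest}}$ (since any function supported off $\Sigma_{{\sf modest}}$ that is orthogonal to ${\sf Embed}_{\sigma}(\mu)$ is automatically constant — namely zero — on $\Sigma_{{\sf modest}}$, hence lies in $\nenm_{\sigma}(\mu)$); so $\chi$ vanishes off $\Sigma_{{\sf modest}}$ and $\mathrm{E}_{\text{non-embed}, 1-\delta}\chi = (1-\delta)\chi$ everywhere, giving item (1).

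The main obstacle, such as it is, is the bookkeeping in the $B_{{\sf modest}}$ case: one must be careful to verify that the relevant indicator-type functions genuinely lie in the spaces ${\sf Embed}_{\sigma}(\mu)$ and $\nenm_{\sigma}(\mu)$, so that orthogonality of $\chi\in B_{{\sf modest}}$ to them yields both the vanishing of the conditional average on $\Sigma_{{\sf modest}}$ and the vanishing of $\chi$ off $\Sigma_{{\sf modest}}$. Once this linear-algebra fact is pinned down, the computation is a two-line verification. Everything else is routine: unwinding the definition of the averaging operator, noting it acts as the identity off $\Sigma_{{\sf modest}}$, and tensorizing.
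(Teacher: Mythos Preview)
Your proposal is correct and follows essentially the same approach as the paper: both reduce to the one-dimensional case, handle item (2) by noting that functions in $B_{{\sf embed}}\cup B_{{\sf non-embed}}$ are constant on $\Sigma_{{\sf modest}}$, and handle item (1) via the key linear-algebra fact that ${\sf span}(B_{{\sf embed}}\cup B_{{\sf non-embed}})$ equals the space of functions constant on $\Sigma_{{\sf modest}}$. The only cosmetic difference is that the paper treats all $x\in\Sigma$ uniformly by writing the modest-chain expectation as $\langle \chi, p/\mu_x\rangle_{\mu_x}$ and observing that $p/\mu_x$ is constant on $\Sigma_{{\sf modest}}$ regardless of whether $x\in\Sigma_{{\sf modest}}$, whereas you split into the two cases and separately argue that $\chi\in B_{{\sf modest}}$ vanishes off $\Sigma_{{\sf modest}}$; your parenthetical justification for this last point is slightly roundabout (you only need that every function supported off $\Sigma_{{\sf modest}}$ is constant---namely zero---on $\Sigma_{{\sf modest}}$ and hence lies in ${\sf span}(B_{{\sf embed}}\cup B_{{\sf non-embed}})$, without the extra orthogonality hypothesis), but the conclusion is right.
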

\begin{proof}
  For the first item, note that for all $x\in \Sigma$,
  \[
  \mathrm{E}_{\text{non-embed}, 1-\delta} \chi(x)
  =(1-\delta)\chi(x) + \delta\Expect{x'\sim  {\sf Modest}~x}{\chi(x')},
  \]
  and we show the last expectation is equal to $0$. We may write this expectation as a sum
  $\sum\limits_{x'\in \Sigma}p(x') \chi(x')$ where $p(x')$ it the probability that a set on the modest Markov chain
  reaches $x'$ from $x$. Clearly, we may write this sum as $\inner{\chi}{\frac{p}{\mu_x}}_{\mu_x}$.
  We note that $p/\mu_x $ is constant on $\Sigma_{{\sf modest}}$. Indeed, if $a,b\in\Sigma_{{\sf modest}}$, then
  $p(a) = \frac{\mu_x(a)}{\mu_x(\Sigma_{{\sf modest}})}$ and $p(b) = \frac{\mu_x(b)}{\mu_x(\Sigma_{{\sf modest}})}$.
  Thus, we have that $p/\mu_x$ is in the span of $B_{{\sf embed}}\cup B_{{\sf non-embed}}$ (as these contain all
  functions that are constant on $\Sigma_{{\sf modest}}$), hence $p/\mu_x$ is perpendicular to $B_{{\sf modest}}$
  and so $\inner{\chi}{\frac{p}{\mu_x}}_{\mu_x}=0$.

  For the second item, note that any such $\chi$ is constant on $\Sigma_{{\sf modest}}$ and the Markov chain
  $\mathrm{E}_{\text{non-embed}, 1-\delta}$ stays at any $x\not\in \Sigma_{{\sf modest}}$, and otherwise stays
  inside $\Sigma_{{\sf modest}}$.
\end{proof}

\subsubsection{Modest Influences}
The modest Markov chain will also be useful for us to define the notion of modest influence of a coordinate as well as the modest total influence.
\begin{definition}
  In the setting of Theorem~\ref{thm:nonembed_deg_must_be_small_rephrase_maximal_relaxed}, for a function
  $f\colon (\Sigma^n, \mu_x^{\otimes n})\to\mathbb{C}$ and a coordinate $i\in [n]$, we define the modest
  influence of $f$ at coordinate $i$ as
  \[
    I_{i, \text{modest}}[f] = \Expect{x'\sim\mu_x^{n-1}, a\sim \mu_x, b\sim {{\sf Modest}}~a}{\card{f(x_{-i} = x', x_i = a) - f(x_{-i} = x', x_i = b)}^2}.
  \]
  The modest total influence of $f$ is $I_{\text{modest}}[f] = \sum\limits_{i=1}^{n} I_{i, \text{modest}}[f]$.
\end{definition}

As usual, using Pareval's equality we get an analytical formula for influences in terms of the orthogonal decomposition of $f$.
\begin{fact}\label{fact:modest_inf_formula}
  In the setting of Theorem~\ref{thm:nonembed_deg_must_be_small_rephrase_maximal_relaxed}, for a function
  $f\colon (\Sigma^n, \mu_x^{\otimes n})\to\mathbb{C}$ and a coordinate $i\in [n]$, we have
  \[
    I_{i, \text{modest}}[f] = 2\sum\limits_{\chi: \chi_{i}\in B_{{\sf modest}}}\card{\widehat{f}(\chi)}^2.
  \]
  Subsequently, the modest total influence of $f$ is
  $I_{\text{modest}}[f] = 2\sum\limits_{\chi}\effnon(\chi)\card{\widehat{f}(\chi)}^2$.
\end{fact}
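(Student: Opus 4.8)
This is the analogue for the effective (modest) influence of Fact~\ref{fact:nonembed_inf_formula}, and I would prove it the same way: expand $f$ in the orthonormal basis $(B_{{\sf embed}}\cup B_{{\sf non-embed}}\cup B_{{\sf modest}})^{\otimes n}$, substitute into the definition of $I_{i,\text{modest}}[f]$, and read off the surviving terms using Parseval. The only structural input beyond bookkeeping is the behaviour of a single coordinate under the modest Markov chain ${\sf Modest}$, which has already been computed (essentially) in Lemma~\ref{lem:effective_noise}.

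\textbf{Key steps.} First I would record that ${\sf Modest}$ is reversible with stationary distribution $\mu_x$ (on $\Sigma\setminus\Sigma_{{\sf modest}}$ it is the identity, and on $\Sigma_{{\sf modest}}$ it resamples from $\mu_x$ restricted to $\Sigma_{{\sf modest}}$; detailed balance is immediate), so that the associated averaging operator is self-adjoint on $L_2(\Sigma;\mu_x)$. Second, I would note that since $\mathrm{E}_{\text{non-embed}, 1-\delta}=(1-\delta)I+\delta\,{\sf Modest}$ as operators, Lemma~\ref{lem:effective_noise} gives that ${\sf Modest}\,\chi=0$ for $\chi\in B_{{\sf modest}}$ and ${\sf Modest}\,\chi=\chi$ for $\chi\in B_{{\sf embed}}\cup B_{{\sf non-embed}}$; concretely this uses that the master embedding, hence every function in $B_{{\sf embed}}$, is constant on $\Sigma_{{\sf modest}}$ (a hypothesis built into Definition~\ref{def:relaxed_base}), that $\nenm_\sigma(\mu)$ consists of functions constant on $\Sigma_{{\sf modest}}$, and that a function in $B_{{\sf modest}}$ is therefore orthogonal to every function constant on $\Sigma_{{\sf modest}}$ — in particular it vanishes off $\Sigma_{{\sf modest}}$ and has mean zero on $\Sigma_{{\sf modest}}$, which forces ${\sf Modest}\,\chi=0$. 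Third, writing ${\sf Modest}^{(i)}$ for the operator that applies ${\sf Modest}$ in coordinate $i$ and the identity elsewhere, the standard Dirichlet-form identity gives
\[
I_{i,\text{modest}}[f]=2\,\mathrm{Re}\,\inner{f}{(I-{\sf Modest}^{(i)})f}=2\inner{f}{(I-{\sf Modest}^{(i)})f},
\]
and by Step 2 together with Parseval, $(I-{\sf Modest}^{(i)})\chi=\mathbbm{1}[\chi_i\in B_{{\sf modest}}]\,\chi$, whence $I_{i,\text{modest}}[f]=2\sum_{\chi:\chi_i\in B_{{\sf modest}}}|\widehat{f}(\chi)|^2$. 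Summing over $i=1,\dots,n$ and counting, for each monomial $\chi$, the number of coordinates $i$ with $\chi_i\in B_{{\sf modest}}$ — which is exactly $\effnon(\chi)$ — yields $I_{\text{modest}}[f]=2\sum_\chi\effnon(\chi)|\widehat{f}(\chi)|^2$.

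\textbf{Alternative / main point of care.} If one prefers to avoid the operator identity, one can expand $f(x_{-i}=x',x_i=a)-f(x_{-i}=x',x_i=b)$ directly, take expectations: orthonormality of the $\prod_{j\ne i}\chi_j$ over $x'\sim\mu_x^{n-1}$ forces $\chi_{-i}=\chi'_{-i}$, and a short computation of $\Expect{a\sim\mu_x,\,b\sim{\sf Modest}\,a}{(\chi_i(a)-\chi_i(b))\overline{(\chi'_i(a)-\chi'_i(b))}}$ shows it equals $2\cdot\mathbbm{1}[\chi_i=\chi'_i]$ when $\chi_i,\chi'_i\in B_{{\sf modest}}$ and $0$ if either lies in $B_{{\sf embed}}\cup B_{{\sf non-embed}}$ (the cross terms $\Expect{}{\chi_i(a)\overline{\chi'_i(b)}}$ vanish precisely because $B_{{\sf modest}}$-functions are orthogonal to functions constant on $\Sigma_{{\sf modest}}$, so ${\sf Modest}$ kills them). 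Either route is routine; the one genuine thing to get right is the characterization of $B_{{\sf modest}}$ relative to $\Sigma_{{\sf modest}}$ (vanishing off $\Sigma_{{\sf modest}}$ and mean zero on it), and I do not expect any real obstacle there.
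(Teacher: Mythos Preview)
Your proposal is correct. The paper does not spell out a proof of this fact (it just says ``As usual, using Parseval's equality\ldots''), but both of your routes---the Dirichlet-form/operator version via $(I-{\sf Modest}^{(i)})$ and the direct expansion mirroring the proof of Fact~\ref{fact:nonembed_inf_formula}---are exactly what is implicitly intended, and your identification of ${\sf span}(B_{{\sf embed}}\cup B_{{\sf non-embed}})$ with the functions constant on $\Sigma_{{\sf modest}}$ (hence $B_{{\sf modest}}$-functions vanish off $\Sigma_{{\sf modest}}$ and have mean zero on it) is the one nontrivial point and is handled correctly.
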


\subsection{Embedding Homogenous functions and Effectively Homogenous functions}
Equipped with the notions of embedding degree and effective embedding degree, we may define homogeneity with respect to them.

\begin{definition}
  We say $f$ is completely embedding homogenous of degree $D$ if there are
  integers $\{D_a\}_{a\in\hat{H}}$ that sum up to $D$ such that for all
  $\chi\in(B_{{\sf embed}}\cup B_{{\sf non-embed}}\cup B_{{\sf modest}})^{\otimes n}$
  such that $\widehat{f}(\chi)\neq 0$, we have that ${\sf embeddeg}_a(\chi) = D_a$.

  We say that $f$ is completely embedding homogenous if it is completely embedding homogenous of degree $D$ for some $D$.
\end{definition}

\begin{definition}
  A function $f\colon \Sigma^n\to\mathbb{C}$ is called non-embedding homogenous of non-embedding degree $d$ if for all
  $\chi\in(B_{{\sf embed}}\cup B_{{\sf non-embed}}\cup B_{{\sf modest}})^{\otimes n}$ such that $\widehat{f}(\chi)\neq 0$ it holds that ${\sf nedeg}(\chi) = d$.

  We say $f$ is non-embedding homogenous if it is non-embedding homogenous of non-embedding degree $d$ for some $d$.
\end{definition}

\begin{definition}
   We say a function $f\colon \Sigma^n\to\mathbb{C}$ has effective non-embedding degree at least $d$
   if for all $\chi\in(B_{{\sf embed}}\cup B_{{\sf non-embed}}\cup B_{{\sf modest}})^{\otimes n}$ such that $\widehat{f}(\chi)\neq 0$ it holds that $\effnon(\chi) \geq d$.
\end{definition}


With the notions of homogenous functions, we may now formulate a version of Theorem~\ref{thm:nonembed_deg_must_be_small_rephrase_maximal_relaxed} for homogenous functions of
bounded $L_2$ norm. We first define the collections of these homogenous functions:
\begin{definition}
Let $n\in\mathbb{N}$.
\begin{enumerate}
  \item For $d, d'$ we define the class $\mathcal{F}'_{n,d,d'}$ to be the collection of functions $f\colon \Sigma^n\to\mathbb{C}$
  that are completely embedding homogenous, non-embedding homogenous of degree $d$ and have effective degree at least $d'$.
  \item We define the class $\mathcal{G}_{n}$ to be the class of functions $g\colon\Gamma^n\to\mathbb{C}$ that are completely embedding homogenous
  and non-embedding homogenous.
  \item We define the class $\mathcal{H}_{n}$ to be the class of all functions $h\colon \Phi^n\to\mathbb{C}$ that are completely embedding homogenous
  and non-embedding homogenous.
\end{enumerate}
\end{definition}
We can now define the parameter $\beta$:
\begin{definition}
  For integers $n\geq d\geq d'$, finite alphabets $\Sigma$, $\Gamma$, $\Phi$ and a distribution $\mu$ over $\Sigma\times \Gamma\times \Phi$
  as in Theorem~\ref{thm:nonembed_deg_must_be_small_rephrase_maximal_relaxed}, we define
  \[
        \beta_{n,d,d'}'[\mu] =
        \sup\limits_{\substack{f\in \mathcal{F}'_{n,d,d'}\\ g\in \mathcal{G}_{n}\\ h\in \mathcal{H}_{n}}}\frac{\card{\Expect{(x,y,z)\sim \mu^{\otimes n}}{f(x)g(y)h(z)}}}{\norm{f}_2\norm{g}_2\norm{h}_2}.
  \]
\end{definition}
When the distribution $\mu$ is clear from context, we often drop it from the notation and denote the parameter simply by $\beta_{n,d,d'}'$.
We are now ready to formulate the homogenous version of Theorem~\ref{thm:nonembed_deg_must_be_small_rephrase_maximal_relaxed}.
\begin{thm}\label{thm:nonembed_homogenous'}
  For all $\alpha>0$ and $m\in\mathbb{N}$ there are $\xi>0$ and $c>0$ such that the following holds. Let $\Sigma$, $\Gamma$ and $\Phi$ be alphabets of size at most
  $m$, and let $\mu$ be a distribution over $\Sigma\times\Gamma\times \Phi$ as in Theorem~\ref{thm:nonembed_deg_must_be_small_rephrase_maximal_relaxed}.
  Suppose that $d,d'\in\mathbb{N}$ satisfy that $d'\geq d^{1-\xi}$. Then
  \[
        \beta_{n,d,d'}'[\mu]\leq (1+c)^{-d'^{1-2\xi}}.
  \]
\end{thm}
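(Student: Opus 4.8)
The proof of Theorem~\ref{thm:nonembed_homogenous'} will be by (double) induction on the number of variables $n$, keeping track of the non-embedding degree $d$ and the effective non-embedding degree lower bound $d'$. The base case is $n$ equal to (a constant times) $d$, i.e.\ the so-called \emph{near linear non-embedding degree regime}, which is handled separately using the relaxed base case (Definition~\ref{def:relaxed_base}) and maximality; I expect this part to be deferred to Section~\ref{sec:prove_near_lin} and will invoke it as a black box here. For the inductive step, suppose $n$ is substantially larger than $d$. The goal is to show that $\beta'_{n,d,d'}[\mu]$ is bounded by a quantity of the form $\max\{\beta'_{n-1,d,d'}[\mu],\ (1-\Omega(1))\beta'_{n-1,d-1,d''}[\mu]\}$ where $d'' \geq (d-1)^{1-\xi}$ is an appropriately decremented effective-degree bound; iterating this recursion down until $n$ reaches the near-linear threshold then yields the stated $(1+c)^{-d'^{1-2\xi}}$ bound after elementary bookkeeping (each time the non-embedding degree drops we gain a $(1-\Omega(1))$ factor, and the effective degree is large enough — because $d' \geq d^{1-\xi}$ — that this happens $\gtrsim d'^{1-2\xi}$ many times before we hit the base case).

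\textbf{The inductive step in detail.}
First, since $n \gg d$, some coordinate — say the $n$th — carries almost all of $f$'s mass on monomials whose $n$th component is an embedding function; this is a pigeonhole/averaging argument over coordinates using complete embedding homogeneity (the total embedding degree is $n-d$, spread over $n$ coordinates). Using complete embedding homogeneity of $f$, $g$, $h$, I would decompose each of them along the last coordinate as $f = \sum_t \psi_t F_t(x_1,\dots,x_{n-1}) F_t'(x_n)$ with $\{F_t\}$, $\{F_t'\}$ orthonormal, $\sum_t|\psi_t|^2 = 1$, and each $F_t'$ lying in $B_{{\sf embed}}$, $B_{{\sf non-embed}}$, or $B_{{\sf modest}}$; when $F_t'\in B_{{\sf embed}}$ the residual $F_t$ has the same non-embedding degree $d$ (and effective degree $\geq d'$), and otherwise the non-embedding degree drops to $d-1$ and the effective degree drops by at most one. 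Expanding $\E_{\mu^{\otimes n}}[fgh]$ as a sum over triples $(t,r,s)$ of products $\psi_t\kappa_r\rho_s \cdot \E_{\mu^{\otimes n-1}}[F_t G_r H_s]\cdot\E_{\mu}[F_t'G_r'H_s']$ as in~\eqref{eq:intro_inductive}, the key structural observations are: (i) whenever one of $F_t', G_r', H_s'$ is a nonconstant embedding function and the other two do not ``match'' the same character of $H$, the univariate factor $\E_\mu[F_t'G_r'H_s']$ vanishes by orthogonality of embedding functions to non-embedding ones and Plancherel over $\hat H$; (ii) the ``main term'' where all three are the same embedding function reduces $n$ by $1$ while preserving all degrees, contributing at most $\beta'_{n-1,d,d'}[\mu]$; (iii) all remaining terms can be regrouped, using that when $F_t',G_r',H_s'$ come from the same character one has the Horn-SAT-type relation $G' = \overline{F'}\,\overline{H'}$, into an expression that is an instance of the \emph{relaxed base case} — i.e.\ an expectation $\E_\mu[F'(x)(aG(y)+bH(z))]$ with $F'$ having genuine variance on $\Sigma_{{\sf modest}}$ — which by Definition~\ref{def:relaxed_base} is bounded by $(1-c\tau^C)$ times a product of $2$-norms. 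Feeding this gain of $(1-\Omega(1))$ into the terms that decrement the non-embedding degree gives exactly the claimed two-way recursion. One also needs to verify (using Claim~\ref{claim:rr_nestab} / Claim~\ref{claim:increase_noise_decrease_stab} and homogeneity) that the coefficients $\kappa_t,\rho_t$ attached to the non-embedding components of $g$ and $h$ are themselves small, so that the ``bad'' terms really are lower order; this is a preparatory reduction analogous to Lemma~\ref{lem:demonstrate_maximality}.

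\textbf{Where the difficulty lies.}
The genuinely hard part is step (iii): re-casting the full collection of residual terms — not just the diagonal one — as a single legitimate instance of the relaxed base case while keeping the quantitative loss polynomial in $\tau$. The naive bound (maximum over $i,j,k$ of $|\E_{\mu^{\otimes n-1}}[F_iG_jH_k]|$) is too weak precisely because it does not extract the $(1-\Omega(1))$ factor when the maximizing index is off-diagonal. Handling this requires carefully exploiting (a) that $\mu_{y,z}$ is uniform and the master embedding is surjective onto $\{(a,b,c):a+b+c=0\}$ (so Cauchy--Schwarz in the $y,z$ variables is clean), (b) the matching-character phenomenon that forces $G',H'$ to be determined by $F'$, and (c) the fact that the effective non-embedding degree of $F$ is still $\geq d' \geq d^{1-\xi}$, so the relaxed base case is applicable at the univariate level with a $\tau$ that is bounded below in terms of how far $f$ is from embedding functions on the active coordinate. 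Managing the interaction between the two exponents $\xi$ and $2\xi$ in the final bound — making sure the decrement $d''\geq(d-1)^{1-\xi}$ survives the recursion long enough — is a delicate but ultimately routine optimization, which I would carry out at the very end once the one-step recursion is in place. I would also isolate the passage between the $1$-bounded formulation (Theorem~\ref{thm:nonembed_deg_must_be_small_rephrase_maximal_relaxed}) and the $L_2$-bounded homogeneous formulation as a separate lemma (a standard truncation-plus-homogenization argument using the noise operators $\mathrm{T}_{\text{non-embed}}$ and $\mathrm{E}_{\text{non-embed}}$), since that is orthogonal to the core induction above.
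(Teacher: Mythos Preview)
Your architecture is broadly right, and you correctly defer the near-linear regime to Section~\ref{sec:prove_near_lin} as a black box. Two remarks, one minor and one substantive.

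First, the paper's proof of Theorem~\ref{thm:nonembed_homogenous'} itself is a one-line reduction: apply the isometry $W$ to send $f\in\mathcal{F}'_{n,d,d'}$ to $F=Wf\in\mathcal{F}_{n,d,d'}$, observe this preserves all degrees and the $3$-wise correlation, and invoke Theorem~\ref{thm:nonembed_homogenous}. What you are outlining is really the proof of Theorem~\ref{thm:nonembed_homogenous}, carried out in Sections~\ref{sec:reduce_to_near_linear}--\ref{sec:prove_near_lin}.

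The substantive gap is in your step (iii). You claim the residual (non-embedding) univariate terms are handled by the \emph{relaxed base case} (Definition~\ref{def:relaxed_base}), which requires $F_t'$ to have variance on $\Sigma_{{\sf modest}}$. But the non-embedding components $F_t'$ split into $B_{{\sf non-embed}}$ and $B_{{\sf modest}}$, and by construction functions in $B_{{\sf non-embed}}$ are \emph{constant} on $\Sigma_{{\sf modest}}$ --- so the relaxed base case says nothing about them, and your bound would fail for those terms. The tool the paper actually uses in Section~\ref{sec:reduce_to_near_linear} is the \emph{additive} base case (Claim~\ref{claim:additive_base_case}): for $f\in{\sf Embed}_\sigma(\mu)^\perp$ one has $|\E_\mu[f(x)(g(y)+h(z))]|\le(1-c)\|f\|_2\|g+h\|_2$. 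This needs only orthogonality to embedding functions, which every $F_t'$ with $t\in T$ satisfies, and its proof is a short compactness argument (equality in Cauchy--Schwarz would force $f(x)=\overline{g(y)+h(z)}$, an additive relation which by Claim~\ref{claim:embedding_fns} forces $f$ to be an embedding function). The relaxed base case enters only in the near-linear regime of Section~\ref{sec:prove_near_lin}, where the additive trick no longer suffices because the ``all non-embedding'' cross term $\E_\mu[F_2'G_2'H_2']$ can no longer be absorbed. A related smaller issue: you assert the non-embedding coefficients $\kappa_r,\rho_s$ of $g,h$ are small, but the paper does not (and cannot) prove this; instead it shows (Lemma~\ref{lem:diagonal_small_beta}) that either a single character $\chi^\star$ dominates all three embedding masses --- enabling the additive-base-case regrouping of Lemma~\ref{lem:bound_additivebase_beta} --- or the diagonal sum $\sum_\chi|\psi_\chi\kappa_\chi\rho_\chi|$ is already bounded away from~$1$.
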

We have the following claim asserting that Theorem~\ref{thm:nonembed_homogenous'} implies Theorem~\ref{thm:nonembed_deg_must_be_small_rephrase_maximal_relaxed}.
The proof uses ``soft-truncation'' and ``truncation'' type argument, and is deferred to Section~\ref{sec:truncate}.
\begin{claim}\label{claim:beta_thm_implies}
  Theorem~\ref{thm:nonembed_homogenous'} implies Theorem~\ref{thm:nonembed_deg_must_be_small_rephrase_maximal_relaxed}.
\end{claim}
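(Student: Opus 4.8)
We must deduce Theorem~\ref{thm:nonembed_deg_must_be_small_rephrase_maximal_relaxed} from Theorem~\ref{thm:nonembed_homogenous'}. The statement of the former gives $1$-bounded functions $f,g,h$ with ${\sf NEStab}_{1-\delta}(g;\nu_y^{\otimes n})\leq \delta$, and we want to bound $\card{\Expect{(x,y,z)\sim\nu^{\otimes n}}{f(x)g(y)h(z)}}$ by $M\delta^\eta$. The plan is to pass from arbitrary $1$-bounded functions to the homogenous, $L_2$-bounded functions handled by $\beta_{n,d,d'}'$, at the cost of controlled error terms; this is the ``soft-truncation plus truncation'' scheme advertised right before the claim. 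I would proceed in four stages: (1) use maximality and merges (via Lemma~\ref{lem:demonstrate_maximality} and its effective-degree analogue Lemma~\ref{lem:demonstrate_maximality2}) to replace $f$ and $h$ by functions of large non-embedding / effective non-embedding degree; (2) decompose into homogenous pieces; (3) apply Theorem~\ref{thm:nonembed_homogenous'} piece by piece and sum; (4) bookkeep the errors.

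\textbf{Stage 1: boosting degrees of $f$ and $h$.} Since ${\sf NEStab}_{1-\delta}(g)\leq\delta$, almost all the $L_2$-mass of $g$ lies on monomials of non-embedding degree $\gtrsim 1/\delta$. Fix a small $\xi>0$ (eventually taken from Theorem~\ref{thm:nonembed_homogenous'}). Write $f = \mathrm{E}_{\text{non-embed},1-\delta^{1-\xi}}^{\otimes n}f + (I-\mathrm{E}_{\text{non-embed},1-\delta^{1-\xi}}^{\otimes n})f$, and similarly split $h$ using $\mathrm{T}_{\text{non-embed},1-\delta^{1-\xi}}^{\otimes n}$. By Lemma~\ref{lem:demonstrate_maximality2} the contribution of $\mathrm{E}_{\text{non-embed},1-\delta^{1-\xi}}^{\otimes n}f$ to the expectation is at most $M\delta^{\eta}$ (this is where conditions 6b of Theorem~\ref{thm:nonembed_deg_must_be_small_rephrase_maximal_relaxed}, i.e. the embedded maximal fully-merged distribution $\tilde\nu$, are invoked exactly as in the proof sketch of Lemma~\ref{lem:demonstrate_maximality}), and similarly for the low-effective-degree part of $h$ after a parallel argument using $\mathrm{T}_{\text{non-embed}}$ and Lemma~\ref{lem:demonstrate_maximality} with the roles of $x,z$ swapped (legitimate since the hypotheses are symmetric in $x,z$ up to the asymmetric conditions, which for this bound we don't need). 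Hence up to an additive $O(M\delta^\eta)$ we may assume $f$ has effective non-embedding degree $\gtrsim \delta^{-\xi}$ and $h$ has non-embedding degree $\gtrsim \delta^{-\xi}$, while $f,h$ remain $1$-bounded (crucially, $I-\mathrm{E}$ and $I-\mathrm{T}$ preserve $\ell_\infty\le 1$ up to a factor $2$) and certainly have $\norm{f}_2,\norm{h}_2\le 1$.

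\textbf{Stages 2--4: homogenization and summation.} Now decompose each of $f,g,h$ into completely-embedding-homogenous and non-embedding-homogenous components; there are only $|\hat H|^{O(1)}\cdot n^{O(|H|)}$ such components, so this is a polynomially-bounded sum and Cauchy--Schwarz / orthogonality lets us write the expectation as a sum of terms $\Expect{\nu^{\otimes n}}{f_{\vec D} g_{\vec E} h_{\vec F}}$ with $f_{\vec D}\in\mathcal F'_{n,d,d'}$, $g_{\vec E}\in\mathcal G_n$, $h_{\vec F}\in\mathcal H_n$, where by Stage 1 every nonzero term has $d'\gtrsim\delta^{-\xi}$ for the $f$-piece, $d\gtrsim\delta^{-\xi}$ for the $h$-piece, and $\gtrsim\delta^{-1}$ non-embedding degree for the $g$-piece (embedding-degree matching across $x,y,z$ forces the relevant degrees to align, using that the master embedding support is full). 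For each surviving term, $d'\ge d^{1-\xi}$ holds for $\xi$ small (since $d$ is at most $n$ and the degrees are comparable via the matching), so Theorem~\ref{thm:nonembed_homogenous'} gives $\card{\Expect{\nu^{\otimes n}}{f_{\vec D}g_{\vec E}h_{\vec F}}}\le (1+c)^{-d'^{1-2\xi}}\norm{f_{\vec D}}_2\norm{g_{\vec E}}_2\norm{h_{\vec F}}_2 \le 2^{-\Omega(\delta^{-\xi(1-2\xi)})}\norm{f_{\vec D}}_2\norm{g_{\vec E}}_2\norm{h_{\vec F}}_2$. Summing over the polynomially-many terms and applying Cauchy--Schwarz together with $\sum\norm{f_{\vec D}}_2^2\le 1$ etc., the total is at most $\mathrm{poly}(n)\cdot 2^{-\Omega(\delta^{-\xi'})}$. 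Finally, since $d=\mathrm{poly}_{m,\alpha}(1/\eps)$ will be chosen in Theorem~\ref{thm:main_stab_3} so that $n$ enters only through this, one truncates once more to reduce the effective number of coordinates to $\mathrm{poly}(1/\delta)$ before summing (a standard random-restriction reduction, using Claim~\ref{claim:rr_nestab} to preserve the high-degree property in expectation) so that the $\mathrm{poly}(n)$ factor becomes $2^{o(\delta^{-\xi'})}$ and is absorbed. Combining with the $O(M\delta^\eta)$ from Stage~1 gives the desired bound $M'\delta^{\eta'}$.

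\textbf{Expected main obstacle.} The delicate point is the degree bookkeeping in Stage~2: after homogenization, one must check that the non-embedding degree lower bound on the $g$-piece genuinely forces either a large non-embedding degree or a large effective degree on the $f$-piece (so that Theorem~\ref{thm:nonembed_homogenous'} is applicable with $d'\ge d^{1-\xi}$), rather than the mass escaping into components where $f$ has high non-embedding but low \emph{effective} degree. This is exactly why Stage~1 boosts the \emph{effective} degree of $f$ (via $\mathrm{E}_{\text{non-embed}}$) and not merely its non-embedding degree; making the interplay of the two truncation operators $\mathrm{E}_{\text{non-embed},1-\delta^{1-\xi}}$ and $\mathrm{T}_{\text{non-embed},1-\delta^{1-\xi}}$ quantitatively consistent — so that the surviving homogenous terms all satisfy the hypothesis $d'\ge d^{1-\xi}$ of Theorem~\ref{thm:nonembed_homogenous'} — is the part requiring care. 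The random-restriction step to kill the $\mathrm{poly}(n)$ factor is routine but must be done before, not after, invoking the homogenous bound, so the order of operations matters.
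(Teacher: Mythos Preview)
Your outline misses the single most important structural step of the paper's argument: the dyadic telescoping of $g$ via $\mathrm{T}_{\text{non-embed},\rho_{j+1}}-\mathrm{T}_{\text{non-embed},\rho_j}$ with $\rho_j=1-2^{-j}\delta$. This is not cosmetic. It localizes the non-embedding degree of each piece of $g$ to a window $[\rho^{-(1-\xi)},\rho^{-(1+\xi)}]$ at scale $\rho=2^{-j}\delta$, and \emph{only then} does the paper truncate $f$ at the matching scale---first from above via $\mathrm{T}_{\text{non-embed},1-\rho^{1+\xi}}$ (Claim~\ref{claim:soft_trunc1}), then from below in effective degree via $\mathrm{E}_{\text{non-embed},1-\rho^{1-\xi}}$ (Claim~\ref{claim:soft_trunc2}). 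Your Stage~1 only does the second of these two truncations on $f$, at a single fixed scale $\delta^{1-\xi}$. Without the upper bound on the non-embedding degree of $f$, you cannot guarantee the hypothesis $d'\geq d^{1-\xi}$ of Theorem~\ref{thm:nonembed_homogenous'}: embedding-degree matching does force the non-embedding degree $d$ of $f$ to be close to that of $g$, but $g$ can carry mass at non-embedding degree anywhere in $[1/\delta,n]$, while your truncation only guarantees $d'\gtrsim\delta^{-(1-\xi)}$. For a $g$-piece at degree $d_g\gg 1/\delta$ the ratio $d'/d$ collapses and the theorem does not apply. Your ``expected main obstacle'' paragraph correctly senses that this is the delicate point, but the sentence ``$d'\geq d^{1-\xi}$ holds for $\xi$ small (since $d$ is at most $n$ and the degrees are comparable via the matching)'' is simply false without scale-matched truncation.

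Two secondary issues: your proposed random-restriction step to kill a $\mathrm{poly}(n)$ factor is a red herring---the paper never incurs such a factor, because the degree-matching argument (which you allude to) already constrains the number of surviving $(\vec D,\vec D',\vec D'')$ triples to $(O(d_4))^{O(|H|)}$, polynomial in $1/\rho$ and absorbed into the exponential gain. And your appeal to Lemma~\ref{lem:demonstrate_maximality} ``with the roles of $x,z$ swapped'' is not justified: the hypotheses of Theorem~\ref{thm:nonembed_deg_must_be_small_rephrase_maximal_relaxed} are genuinely asymmetric in $x$ and $z$ (the relaxed base case lives in $\Sigma$, property~6c is about $\Sigma'\times\Gamma$), and the paper never touches $h$ with a soft truncation---all the soft work is on $f$, with $h$ controlled only through the harsh matching step at the end.
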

\begin{proof}
  Deferred to Section~\ref{sec:truncate}.
\end{proof}

\subsection{Reformulating Theorem~\ref{thm:nonembed_homogenous'}: Functions that are Constant on Connected Components}
As for the proof of Theorem~\ref{thm:nonembed_homogenous'}, for technical reasons it will be more convenient for us to view the parameter $\beta_{n,d_1,d_1'}'$
in a different but equivalent way, and prove an analogous statement for it. Fix a distribution $\mu$ as in Theorem~\ref{thm:nonembed_homogenous'}; the distribution
$\mu_{y,z}$ is uniform and hence is very nice to work with, but the distribution $\mu_x$ may be more complicated, hence it will be more convenient for us to switch
to a statement that is only concerned with $\mu_{y,z}$. For that, we are going to use the fact that as in $\mu$ it holds that the value of $y,z$ implies $x$, there is
a natural identification between functions over $x$, and a certain class of functions over $(y,z)$.

Given a function $f\colon \Sigma^n\to\mathbb{C}$, we may define the function
$F \colon \Gamma^n\times \Phi^n\to\mathbb{C}$ by
\[
F(y,z) = f(x)
\]
where $x\in \Sigma^n$ is the unique point such that $(x_i,y_i,z_i)$ is in the support of $\mu$ for all $i$ (recall that this $x$ is unique as in $\mu$, it holds
that $y,z$ implies $x$). We view this transformation as a mapping $W\colon L_2(\Sigma^n;\mu_x^{\otimes n})\to L_2(\Gamma^n\times \Phi^n; \mu_{y,z}^{\otimes n})$.
A function in the image of $W$ is not an arbitrary function, and we refer to such functions as functions that are constant on connected components:
\begin{definition}\label{def:const_on_ccs}
  We say a function $F\colon \Gamma^n\times \Phi^n\to\mathbb{C}$ is constant on connected components if there is $f\colon \Sigma^n\to\mathbb{C}$
  such that $F = W f$.
\end{definition}
The reason for this terminology is that  we consider a graph $(V,E)$ whose vertex set is $V = \Gamma\times \Phi$,
and the vertices $(y,z)$ and $(y',z')$ are adjacent if there is an $x$ such that $(x,y,z)$ and $(x,y',z')$ are both in $\Gamma\times \Phi$;
then a function $F\colon \Gamma\times \Phi\to\mathbb{C}$ that is constant on connected components as per the above definition
precisely corresponds to a function that is constant on each connected component of this graph:
\begin{lemma}\label{lem:constant_on_ccs_equiv}
  The following conditions are equivalent for $F\colon \Gamma^n\times \Phi^n\to\mathbb{C}$:
  \begin{enumerate}
    \item $F$ is constant on connected components as per Definition~\ref{def:const_on_ccs}.
    \item $F$ is constant on each connected component of $(V,E)^{\otimes n}$.
    \item $F \in {\sf span}\left(\sett{W\chi}{\chi\in (B_{{\sf embed}}\cup B_{{\sf non-embed}}\cup B_{{\sf modest}})^{\otimes n}}\right)$.
  \end{enumerate}
\end{lemma}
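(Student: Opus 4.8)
\textbf{Proof plan for Lemma~\ref{lem:constant_on_ccs_equiv}.}

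The plan is to prove the three-way equivalence by a cycle of implications, $(1)\Rightarrow(2)\Rightarrow(3)\Rightarrow(1)$, where the only genuinely substantive step is $(1)\Leftrightarrow(2)$ — namely, identifying the image of the map $W$ with the space of functions constant on connected components of $(V,E)^{\otimes n}$ — while $(2)\Leftrightarrow(3)$ is a routine linear-algebra identification, since by construction the basis elements $W\chi$ span the image of $W$.

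\emph{First, $(1)\Rightarrow(2)$.} Suppose $F = Wf$ for some $f\colon\Sigma^n\to\mathbb{C}$. I would argue that if $(y,z)$ and $(y',z')$ lie in the same connected component of $(V,E)^{\otimes n}$, then $f$ assigns them the same value via the $W$-construction. It suffices to check this for a single edge of $(V,E)^{\otimes n}$, i.e.\ when $(y,z)$ and $(y',z')$ differ in exactly one coordinate $i$ and $(y_i,z_i)\sim_{E}(y_i',z_i')$; the general case follows by transitivity along a path. But $(y_i,z_i)$ and $(y_i',z_i')$ are $E$-adjacent precisely when there is a common $x_i\in\Sigma$ with $(x_i,y_i,z_i)$ and $(x_i,y_i',z_i')$ both in ${\sf supp}(\mu)$; since in $\mu$ the value of $y,z$ determines $x$, the unique preimage string $x$ for $(y,z)$ and the unique preimage string $x'$ for $(y',z')$ agree in coordinate $i$ (both equal this common $x_i$) and agree in all other coordinates (where $(y,z)$ and $(y',z')$ agree). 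Hence $x = x'$ and $F(y,z)=f(x)=f(x')=F(y',z')$.

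\emph{Next, $(2)\Rightarrow(3)$ and $(3)\Rightarrow(1)$.} For $(3)\Rightarrow(1)$: each $W\chi$ is by definition in the image of $W$, and $W$ is linear, so any element of the span of $\{W\chi\}$ is $Wf$ for the corresponding linear combination $f$ of the $\chi$'s; thus it is constant on connected components. For $(2)\Rightarrow(3)$: the space of functions constant on connected components of $(V,E)^{\otimes n}$ has dimension equal to the number of such connected components; I would show this number equals $\card{\Sigma^n}$ (equivalently that each connected component of $(V,E)^{\otimes n}$ contains exactly one point of the form "the $(y,z)$-projection of some $x\in\Sigma^n$", using that $\mu$ is pairwise connected so the graph on $\Gamma\times\Phi$ used in the definition of ``constant on connected components'' has its components indexed by $\Sigma$ — this is exactly the graph-theoretic reformulation of ``$y,z$ imply $x$'' together with connectivity). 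Since $\{W\chi : \chi\in(B_{{\sf embed}}\cup B_{{\sf non-embed}}\cup B_{{\sf modest}})^{\otimes n}\}$ is the image under the injective(-onto-its-image) linear map $W$ of an orthonormal basis of $L_2(\Sigma^n;\mu_x^{\otimes n})$, its span has dimension $\card{\Sigma^n}$; combined with the inclusion ``span$\{W\chi\}\subseteq$ functions constant on c.c.'' from $(3)\Rightarrow(1)\Rightarrow(2)$ already established, a dimension count forces equality, giving $(2)\Rightarrow(3)$.

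\textbf{Main obstacle.} The only place requiring care is the dimension/connectivity bookkeeping: one must verify that $W$ is injective (clear, since $f\mapsto Wf$ just relabels the domain via the bijection between $\Sigma^n$ and the set of ``legal'' $(y,z)$-fibers) and that the connected components of the relevant graph on $(\Gamma\times\Phi)^n$ are in bijection with $\Sigma^n$ — this uses pairwise connectedness of $\mu$ (so that within each fiber over a fixed $x\in\Sigma^n$ the induced subgraph is connected) and the ``$y,z$ imply $x$'' hypothesis (so that distinct $x$ give disjoint fibers, hence lie in distinct components). Everything else is formal. I would keep the write-up short, doing $(1)\Rightarrow(2)$ carefully and dispatching the rest with the dimension count.
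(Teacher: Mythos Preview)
Your cycle $(1)\Rightarrow(2)\Rightarrow(3)\Rightarrow(1)$ works, but it is somewhat more roundabout than the paper's argument, which proves $(1)\Leftrightarrow(2)$ and $(1)\Leftrightarrow(3)$ separately and in each case does the ``reverse'' direction by a direct construction rather than a dimension count. In particular, for $(2)\Rightarrow(1)$ the paper simply defines $f(x)=F(y,z)$ for any $(y,z)$ in the fiber over $x$ and observes this is well-defined by constancy on components; this avoids your dimension-counting detour through $(3)$ and the injectivity of $W$. Two minor points about your write-up: first, an edge of the tensor product $(V,E)^{\otimes n}$ need not differ in exactly one coordinate --- it is coordinate-wise adjacency --- so your ``single edge'' description is really describing the Cartesian product; this is harmless here because $(V,E)$ has self-loops and hence both products have the same connected components, but you should either say so or handle a general tensor edge directly (as the paper does: one common $x_i$ per coordinate gives a common $x$). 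Second, you do not actually need pairwise connectedness to index the components by $\Sigma$; what you need is that the support of $\mu_{y,z}$ is full on $\Gamma\times\Phi$ (so every $(y,z)$ has a unique $x$) and that every $x\in\Sigma$ is hit (which follows from $\mu_x$ having full support), together with the observation that $(y,z)\sim(y',z')$ in $(V,E)$ iff they determine the same $x$, making each fiber a clique and hence a single component.
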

\begin{proof}
  We show that the first item is equivalent to the second item and that the first item is equivalent to the third item.

  \paragraph{The first item is equivalent to the second item.}
  Given $F$ as the second item, for $x\in\Sigma^n$ we take $y\in\Gamma^n$ and $z\in\Phi^n$
  such that $(x_i,y_i,z_i)$ is in the support of $\mu$ for all $i$, and define $f(x) = F(y,z)$. We note that this
  is well defined, as for such potential $y,z$ the value of $F(y,z)$ is the same (as they are all in the same connected
  component of $(V,E)^{\otimes n}$. Thus, $F = W f$ and so $F$ is constant on connected components as per Definition~\ref{def:const_on_ccs}.

  Given $F$ as in the first item, we have that $F = Wf$ for some $f\colon \Sigma^n\to\mathbb{C}$. If $(y,z)$ and $(y',z')$ are adjacent in
  the graph $(V,E)^{\otimes n}$, then there is $x\in \Sigma^n$ such that $(x_i,y_i,z_i)$ and $(x_i,y_i',z_i')$ are in the support for $\mu$
  for all $i$, hence $F(y,z) = f(x)$ and $F(y',z') = f(x)$ by the definition of $W$. It follows that the value of $F$ is the same vertices
  that are adjacent in $(V,E)^{\otimes n}$, hence $F$ is constant on the connected components of $(V,E)^{\otimes n}$.

  \paragraph{The first item is equivalent to the third item.}
  Taking $F$ as in the first item, writing $F = W f$ and expanding $f$ according to the basis
  $(B_{{\sf embed}}\cup B_{{\sf non-embed}}\cup B_{{\sf modest}})^{\otimes n}$
  and using the linearity of $W$, the third item follows. In the reverse, given
  \[
  F = \sum\limits_{\chi} a_{\chi} W\chi,
  \]
  we can write $F = W f$ for $f = \sum\limits_{\chi} a_{\chi}\chi$.
\end{proof}

We note that the set $\sett{W\chi}{\chi\in B_{{\sf embed}}\cup B_{{\sf non-embed}}\cup B_{{\sf modest}}}$ is an orthonormal set
in $L_2(\Gamma\times\Phi,\mu_{y,z})$, as
\[
\inner{W\chi}{W \chi'}_{\mu_{y,z}} = \inner{\chi}{\chi'}_{\mu_x} = 1_{\chi = \chi'},
\]
hence $\sett{W\chi}{\chi\in B_{{\sf embed}}\cup B_{{\sf non-embed}}\cup B_{{\sf modest}}}^{\otimes n}$ is an orthonormal basis
to the space of functions $F\in L_2(\Gamma^n\times\Phi^n; \mu_{y,z}^{\otimes n})$ that are constant on connected components.
Thus, to translate Theorem~\ref{thm:nonembed_homogenous'} to the language of functions that are constant on connected components,
it remains to discuss the analog of degrees, non-embedding degrees and effective non-embedding degrees. These are all very natural
analogs of the notions we have already seen.

\begin{definition}
  A function $F\colon \Gamma^n\times \Phi^n\to\mathbb{C}$ which is constant
  on connected components is said to be completely embedding homogenous of degree $D$
  if there are integers $\{D_a\}_{a\in\hat{H}}$ summing to $D$ such that
  for all $\chi\in(B_{{\sf embed}}\cup B_{{\sf non-embed}}\cup B_{{\sf modest}})^{\otimes n}$
  for which $\widehat{F}(W\chi)\neq 0$ it holds that ${\sf embeddeg}_a(\chi) = D_a$.

  We say $F$ is completely embedding homogenous if it is completely embedding homogenous of degree $D$ for some $D$.
\end{definition}

\begin{definition}
  A function $F\colon \Gamma^n\times \Phi^n\to\mathbb{C}$ which is constant on connected components is called
  non-embedding homogenous of non-embedding degree $d$ if for all $\chi\in(B_{{\sf embed}}\cup B_{{\sf non-embed}}\cup B_{{\sf modest}})^{\otimes n}$
  such that $\widehat{F}(W\chi)\neq 0$ it holds that ${\sf nedeg}(\chi) = d$.
\end{definition}

\begin{definition}
   We say a function $F\colon \Gamma^n\times \Phi^n\to\mathbb{C}$ which is constant on connected components has effective non-embedding degree at least $d$
   if for all $\chi\in(B_{{\sf embed}}\cup B_{{\sf non-embed}}\cup B_{{\sf modest}})^{\otimes n}$ such that $\widehat{F}(W\chi)\neq 0$ it holds that $\effnon(\chi) \geq d$.
\end{definition}

Lastly, we define the analog of the class $\mathcal{F}'_{n,d,d'}$.
\begin{definition}
For $n\geq d\geq d'$ we define the class $\mathcal{F}_{n,d,d'}$ to be the collection of functions $F\colon \Gamma^n\times \Phi^n\to\mathbb{C}$
that are constant on connected components, are completely embedding homogenous, are non-embedding homogenous of degree $d$ and effective non-embedding degree at
least $d'$.
\end{definition}

With all of this, we now define the analog of the parameter $\beta_{n,d,d'}'[\mu]$ for functions that are constant on connected components.
\begin{definition}
  For integers $n\geq d\geq d'$, finite alphabets $\Sigma$, $\Gamma$, $\Phi$ and a distribution $\mu$ over
  $\Sigma\times\Gamma\times \Phi$ as in Theorem~\ref{thm:nonembed_deg_must_be_small_rephrase_maximal_relaxed}, we define
  \[
         \beta_{n,d,d'}[\mu] =
        \sup\limits_{\substack{F\in \mathcal{F}_{n,d,d'}\\ g\in \mathcal{G}_{n}\\ h\in \mathcal{H}_{n}}}
        \frac{\card{\Expect{(x,y,z)\sim \mu^{\otimes n}}{F(y,z)g(y)h(z)}}}{\norm{F}_2\norm{g}_2\norm{h}_2}.
  \]
\end{definition}
Often the distribution $\mu$ will be clear from context, in which case we may omit it from the notation and simply write $\beta_{n,d,d'}$.
The following result is an equivalent formulation of Theorem~\ref{thm:nonembed_homogenous'} in the language of functions that are constant on connected components;
its proof is given in Sections~\ref{sec:reduce_to_near_linear} and~\ref{sec:prove_near_lin}.
\begin{thm}\label{thm:nonembed_homogenous}
  For all $\alpha>0$ and $m\in\mathbb{N}$ there are $\xi>0$ and $c>0$ such that the following holds.
  Let $\Sigma$, $\Gamma$ and $\Phi$ be alphabets of size at most
  $m$, and let $\mu$ be a distribution over $\Sigma\times\Gamma\times \Phi$ as in Theorem~\ref{thm:nonembed_deg_must_be_small_rephrase_maximal_relaxed}.
  Suppose that $n\geq d\geq d'$ satisfy that $d'\geq d^{1-\xi}$. Then
  \[
        \beta_{n,d,d'}[\mu]\leq (1+c)^{-d'^{1-2\xi}}.
  \]
\end{thm}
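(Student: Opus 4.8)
The plan is to prove Theorem~\ref{thm:nonembed_homogenous} by a double induction on the number of variables $n$ and on the non-embedding degree $d$, in the style of the inductive argument of~\cite{BKMcsp2} that was sketched in Section~\ref{sec:pf_overview}. The heart of the matter is to establish a recursion of the form $\beta_{n,d,d'} \leq \max\{\beta_{n-1,d,d''}, (1-\Omega(1))\beta_{n-1,d-1,d''-1}\}$ (with $d''$ roughly $d'$, possibly decreased by a constant), so that iterating down to $n = d$ or to small $d'$ accumulates a gain of $(1-\Omega(1))$ a number of times comparable to $d'^{1-2\xi}$. To run the recursion, given $F \in \mathcal{F}_{n,d,d'}$, $g \in \mathcal{G}_n$, $h \in \mathcal{H}_n$ achieving $\beta_{n,d,d'}$, I would first split off the last coordinate. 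Because $F$ is constant on connected components, completely embedding homogenous and non-embedding homogenous, one can write $F(y,z) = \sum_{t\in T}\psi_t F_t(y_{[n-1]},z_{[n-1]}) F_t'(y_n,z_n)$ where each $F_t'$ is one of the three types of basis functions (embedding / modest-constant non-embedding / modest), $\{F_t\}$ and $\{F_t'\}$ are orthonormal systems, $\sum|\psi_t|^2 = 1$, and when $F_t'$ is an embedding function $F_t$ still has non-embedding degree $d$ and effective degree $\geq d'$, while when $F_t'$ is non-embedding (resp. modest) then $F_t$ has non-embedding degree $d-1$ (resp. effective degree $\geq d'-1$ and non-embedding degree $d-1$). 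Do the analogous splits for $g$ and $h$, expand $\Expect{\mu^{\otimes n}}{FgH}$ over the last coordinate exactly as in~\eqref{eq:intro_inductive}, and observe that every term where the $F$-component is an embedding function and the $g$-, $h$-components are \emph{not} the matching embedding/conjugate pair vanishes by independence of $y_n$ and $z_n$ (here we crucially use that $\mu_{y,z}$ is uniform and that the master embedding distribution is uniform on $\{a+b+c=0\}$).

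The surviving terms are of two kinds: the ``diagonal'' embedding term, which contributes at most $\beta_{n-1,d,d''}$ by the inductive hypothesis on $n$; and terms in which the $F$-component is non-embedding or modest — these must be controlled to gain the $(1-\Omega(1))$ factor and drop a degree. For those, I would use the \textbf{relaxed base case} (Statement~\ref{statement:relaxed_base_case}, guaranteed for $\nu$ by the constructions of Section~\ref{sec:base_case}) applied in the last coordinate: if the $F$-component there lies in $B_{\text{modest}}$ then it has variance on $\Sigma_{\text{modest}}$, so the univariate three-wise correlation it participates in is bounded by $(1-c\tau^C)$ of the product of norms; packaging this correctly — after re-casting the terms into a form $\Expect{\tilde\mu}{F_1'F_n'(aG_2 + bH_2)}$ as in the ``additive base case'' discussion — yields a genuine multiplicative gain. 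When the $F$-component is merely in $B_{\text{non-embed}}$ (constant on $\Sigma_{\text{modest}}$ but still orthogonal to embeddings) one does not gain from a single coordinate; instead one relies on the fact, established via Lemma~\ref{lem:demonstrate_maximality2} and the effective noise operator $\mathrm{E}_{\text{non-embed},1-\delta}$, that a homogenous function of large non-embedding degree but small effective degree contributes negligibly, so the genuinely dangerous case is when the effective degree is comparable to $d$, which is exactly the range $d' \geq d^{1-\xi}$ in the hypothesis. This bifurcation into the ``$n \gg d'$'' case (Section~\ref{sec:reduce_to_near_linear}) and the ``near-linear effective degree'' case (Section~\ref{sec:prove_near_lin}) mirrors Step 4 of the overview: in the first case one iterates the $n$-reduction many times before being forced to drop a degree, in the second case one needs the stronger form of the base case (the ideal/relaxed base case rather than just the additive one) to handle the term $\psi_2\kappa_2\rho_2\Expect{}{F_2G_2H_2}\Expect{}{F_2'G_2'H_2'}$ that is no longer negligible.

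Carrying this out requires a few technical preparations. First, one needs the Cauchy--Schwarz step that bounds $\card{\Expect{}{FgH}}$ by $\norm{F}_2\norm{g}_2\norm{h}_2$ in the first place — this is where independence of $y$ and $z$ under $\mu$ is used, giving $\beta_{n,d,d'}\leq 1$ unconditionally as the base of the induction. Second, one needs to know that the coefficients $\kappa_2, \rho_2$ (the non-embedding/modest ``mass in the last coordinate'' of $g$ and $h$) are small in aggregate over coordinates; this is not automatic but follows from averaging over which coordinate we peel off together with the homogeneity of $g,h$, combined with the noise-operator comparison lemmas (Lemma~\ref{lem:op_comparison_lemma}, Claim~\ref{claim:rr_nestab}) — essentially one shows that only a $1/d$-fraction of coordinates can carry non-embedding mass of $F$, so a random choice of the peeled coordinate lands on an embedding coordinate of $F$ with probability $1-d/n$, and this is where the $n\gg d'$ versus near-linear split comes from. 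Third, the quantitative accounting: each degree drop buys a factor $(1-c\tau^C)$ with $\tau, C$ depending only on $\alpha, m$, and the number of drops along the induction path is $\geq d' - O(1) \geq d'^{1-2\xi}$ once $d'\geq d^{1-\xi}$, giving the claimed bound $(1+c)^{-d'^{1-2\xi}}$ after renaming constants.

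The main obstacle, as flagged in the overview, is the \textbf{near-linear effective degree case} and its entanglement with the \textbf{Horn-SAT obstruction}: when $d'$ is comparable to $n$ one cannot reduce $n$ without also reducing degree, the ``garbage'' term $\Expect{}{F_2'G_2'H_2'}$ in the last coordinate is genuinely of constant size, and the naive compactness argument for the ideal base case fails because $F$ may vanish on parts of the alphabet. Overcoming this is precisely what the relaxed base case (Definition~\ref{def:relaxed_base}, verified in Lemma~\ref{lem:nu_satisfies}) is designed for, but plugging it into the inductive step is delicate: one must re-group the several non-vanishing terms in~\eqref{eq:intro_inductive} into an expression to which Statement~\ref{statement:relaxed_base_case} applies with a $\tau$ that is bounded below by a fixed polynomial in the effective degree contribution of the last coordinate, and one must ensure that the functions $\tilde F_2, \tilde G_2, \tilde H_2$ emerging from the re-grouping are again in the homogenous classes $\mathcal{F}_{n-1,d-1,d'-1}, \mathcal{G}_{n-1}, \mathcal{H}_{n-1}$ so that the induction closes. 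Getting a \emph{polynomial} (rather than merely positive) dependence between $\tau$ and the gain $\lambda$ throughout — needed for the final quantitative bound — is the most error-prone part and is where I expect to spend the bulk of the effort, following Sections~\ref{sec:reduce_to_near_linear}--\ref{sec:prove_near_lin}.
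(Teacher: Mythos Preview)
Your proposal correctly identifies the overall architecture: split into the regime $n\gg d$ (Section~\ref{sec:reduce_to_near_linear}) versus the near-linear regime (Section~\ref{sec:prove_near_lin}), peel off a coordinate via an SVD-type decomposition respecting the embedding/non-embedding/modest basis, and use the additive base case (Claim~\ref{claim:additive_base_case}) in the first regime and the relaxed base case in the second. The account of the $n\gg d$ step matches the paper's Lemma~\ref{lem:reduce_beta} quite closely.

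The gap is in the near-linear case. You propose to ``re-group the several non-vanishing terms in~\eqref{eq:intro_inductive} into an expression to which Statement~\ref{statement:relaxed_base_case} applies''; but the paper's own overview explicitly notes that this direct re-arrangement does \emph{not} work here (unlike the additive base case, there is no way to collapse the mixed terms into a single instance of the relaxed base case). The paper's actual argument in Section~\ref{sec:prove_near_lin} is structurally different: it drops homogeneity and works with a coarser parameter $\delta_{n,d'}$, runs a naive Cauchy--Schwarz bound and then assumes all three Cauchy--Schwarz steps are nearly tight (equations~\eqref{eq:assume_delta_1}--\eqref{eq:assume_delta_3}). From near-tightness it introduces random complex perturbations $\pi_r,\theta_s$ and studies the scalar $p(\{\pi_r\},\{\theta_s\})=\sum_t|\langle F_t,\overline{\tilde g\tilde h}\rangle|^2$, showing simultaneously that (i) $p$ is pointwise bounded by $\delta_{n-1,d'-1}^2$ with expectation close to this bound, hence has tiny variance, and (ii) $p$ has non-negligible variance. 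Step (ii) is the crux: it goes through a \emph{dimensionality argument} (Fact~\ref{fact:dimensionality}) showing that the vectors $V_{r,s}=(\widehat{F_t}(r,s))_t$ cannot all be nearly orthogonal since they live in a space of dimension $|T'|\leq |R'||S'|-1$, and the relaxed base case enters only at the very end (Claim~\ref{claim:use_relaxed_base}) to force this dimension drop by showing that each $f_t'$ has small variance on $\Sigma_{\sf modest}$, contradicting the large modest influence of the peeled coordinate. None of this perturbation/variance/dimension-counting machinery appears in your plan, and without it the near-linear step does not close.
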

\subsubsection{Theorem~\ref{thm:nonembed_homogenous} implies Theorem~\ref{thm:nonembed_homogenous'}}\label{sec:nonembed_hom_implies}
We first show that Theorem~\ref{thm:nonembed_homogenous} implies Theorem~\ref{thm:nonembed_homogenous'}.\footnote{The reverse direction is also true and the proof proceeds via similar lines, but is omitted since it is unnecessary for us.}
Let $f,g,h$ be as in the definition of $\beta_{n,d,d'}[\mu]$. Taking $F = W f$, we see that $\norm{F}_2 = \norm{f}_2$ and
\[
\Expect{(x,y,z)\sim \mu^{\otimes n}}{f(x)g(y)h(z)} = \Expect{(x,y,z)\sim \mu^{\otimes n}}{F(y,z)g(y)h(z)}.
\]
Moreover, by Lemma~\ref{lem:constant_on_ccs_equiv} and the definitions it follows that $F$ is completely embedding homogenous,
non-embedding homogenous of degree $d$ and effective non-embedding of degree at least $d'$. It follows that the supremum defining
$\beta_{n,d,d'}'[\mu]$ is at most
the supremum defining $\beta_{n,d,d'}[\mu]$, hence by Theorem~\ref{thm:nonembed_homogenous} we get that
$\beta_{n,d,d'}'[\mu]\leq \beta_{n,d,d'}[\mu] \leq (1+c)^{-d'^{1-2\xi}}$.\qedhere

\subsection{Influences for Functions Constant on Connected Components}
As the map $W$ is a $1$-to-$1$ correspondence between functions over $x$, and functions over $y,z$ that are constant on connected components,
we get analogs of the various notions of influences (non-embedding and modest) for functions that are constant on connected components.
Below we define them formally.
\begin{definition}
  In the setting of Theorem~\ref{thm:nonembed_deg_must_be_small_rephrase_maximal_relaxed}, let $F\colon \Gamma^n\times\Phi^n\to\mathbb{C}$
  be a function which is constant on connected components, and let $f\colon \Sigma^n\to\mathbb{C}$ be the unique function such that
  $F = Wf$. For a coordinate $i\in [n]$:
  \begin{enumerate}
    \item The non-embedding influence of $i$ on $F$, denoted by $I_{i,\text{non-embed}}[F]$, is defined to be $I_{i, \text{non-embed}}[f]$.
    The total non-embedding influence of $F$ is $I_{\text{non-embed}}[F] = \sum\limits_{i=1}^{n}I_{i,\text{non-embed}}[F]$.
    \item The modest influence of $i$ on $F$, denoted by $I_{i,\text{modest}}[F]$, is defined to be $I_{i, \text{modest}}[f]$.
    The total modest influence of $F$ is $I_{\text{modest}}[F] = \sum\limits_{i=1}^{n}I_{i, \text{modest}}[F]$.
  \end{enumerate}
\end{definition}
\section{Proof of Theorem~\ref{thm:nonembed_homogenous}: Reducing to Near Linear Degrees}\label{sec:reduce_to_near_linear}
In this section we begin the proof of Theorem~\ref{thm:nonembed_homogenous} and show the following lemma.
\begin{lemma}\label{lem:reduce_beta}
  For all $m\in\mathbb{N}$ and $\alpha>0$ there are $\eps>0$ and $L>0$ such that the following holds. Let $n\geq d\geq d'$ be integers
  and let $\mu$ be as in Theorem~\ref{thm:nonembed_homogenous}. If $n\geq L d$, then
  \[
  \beta_{n,d,d'}\leq \max\left(\beta_{n-1,d,d'}, (1-\eps)\beta_{n-1,d-1,d'-1}\right).
  \]
\end{lemma}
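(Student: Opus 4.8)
The plan is to prove the recursion in Lemma~\ref{lem:reduce_beta} by ``peeling off'' the last coordinate, in the spirit of the inductive argument sketched in the proof overview (Step~4, the case $n \gg$ non-embedding degree). Fix $F \in \mathcal{F}_{n,d,d'}$, $g\in\mathcal{G}_n$, $h\in\mathcal{H}_n$ achieving (nearly) the supremum defining $\beta_{n,d,d'}$, normalized so $\norm{F}_2=\norm{g}_2=\norm{h}_2=1$. I would decompose each function along the $n$-th coordinate. Since $\mu_{y,z}$ is uniform and $F$ is constant on connected components, write $F$ in the orthonormal basis $\{W\chi\}$, grouping monomials by the type of their $n$-th component: those where $\chi_n\in B_{{\sf embed}}$ contribute the ``embedding part'' and those where $\chi_n\notin B_{{\sf embed}}$ contribute the ``non-embedding part'' (which, by homogeneity of $F$, have non-embedding degree exactly $d-1$ in the first $n-1$ coordinates). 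Concretely, I would write
\[
F(y,z) = \sum_{t} \psi_t\, F_t(y_{-n},z_{-n})\, F_t'(y_n,z_n),
\]
where $\{F_t\}$ and $\{F_t'\}$ are orthonormal, $\sum_t |\psi_t|^2 = 1$, each $F_t'$ is a single basis element $W\chi_n$, and $F_t$ lies in $\mathcal{F}_{n-1,d,d'}$ or $\mathcal{F}_{n-1,d-1,d'-1}$ according to whether $F_t'$ is embedding or not (and if $F_t' \in B_{{\sf modest}}$, the effective degree of $F_t$ drops by one as well). Do the analogous decompositions for $g = \sum_r \kappa_r g_r g_r'$ and $h = \sum_s \rho_s h_s h_s'$.

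\textbf{The key algebraic step.} Expanding $\Expect{\mu^{\otimes n}}{F g h}$ and using independence of the $n$-th coordinate from the rest gives
\[
\Expect{\mu^{\otimes n}}{Fgh} = \sum_{t,r,s} \psi_t \kappa_r \rho_s\, \Expect{\mu^{\otimes n-1}}{F_t g_r h_s}\, \Expect{\mu}{F_t' g_r' h_s'}.
\]
The crucial observation (already used in Lemma~\ref{lem:very_base_case} and Lemma~\ref{lem:demonstrate_merge}) is that if $F_t' \in B_{{\sf embed}}$, i.e.\ $F_t' = W(\chi\circ\sigma)$ for some character $\chi$, then on the support of $\mu$ we have $F_t'(y_n,z_n) = \chi(\sigma(x_n)) = \overline{\chi(\gamma(y_n))\,\chi(\phi(z_n))}$, so $\Expect{\mu}{F_t' g_r' h_s'}$ factors as a product of an expectation over $y_n$ and one over $z_n$, and vanishes unless $g_r'$ and $h_s'$ are the matching characters $\chi\circ\gamma$, $\chi\circ\phi$. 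Thus among terms with $F_t'$ embedding, only ``aligned'' triples survive, and these can be absorbed: bounding their total contribution by $\card{\psi_1}\card{\Expect{\mu^{\otimes n-1}}{F_1 g_1 h_1}} \le \beta_{n-1,d,d'}$ after a Cauchy--Schwarz over the aligned characters (here $\psi_1$ collects the embedding mass on coordinate $n$, which is $\le 1$). For the remaining terms, where $F_t' \notin B_{{\sf embed}}$, I would invoke the \emph{additive base case} (Statement~\ref{statement:relaxed_base_case}, or rather its additive consequence discussed in the overview): if $F'$ is a non-embedding univariate function and $G',H'$ are arbitrary, then $\card{\Expect{\mu}{F'(G'+H')}}\le (1-\Omega(1))\norm{F'}_2\norm{G'+H'}_2$. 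This lets me show that the total non-embedding contribution is at most $(1-\eps)$ times a quantity of the form $\card{\Expect{\mu^{\otimes n-1}}{\tilde F \tilde g \tilde h}}$ with $\tilde F \in \mathcal{F}_{n-1,d-1,d'-1}$, after re-arranging the aligned-character terms so that their contribution is recast (using $g_1' = \overline{F_1'}\,\overline{h_1'}$ type identities) as an instance of the additive base case. Combining the two estimates with a single Cauchy--Schwarz across the ``embedding'' vs ``non-embedding'' split on coordinate $n$ yields $\beta_{n,d,d'} \le \max(\beta_{n-1,d,d'}, (1-\eps)\beta_{n-1,d-1,d'-1})$, where the hypothesis $n \ge Ld$ is used to guarantee (via a pigeonhole/averaging over coordinates) that there is a coordinate on which $F$ has only a small $\ell_2$-mass $\psi_2$ on non-embedding components—this is what makes the ``keep the degree'' branch dominate when appropriate and makes the error terms manageable.

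\textbf{Where the work concentrates.} The main obstacle is the bookkeeping of the mixed terms in the triple sum: one must carefully control the terms $\psi_t\kappa_r\rho_s \Expect{\mu^{\otimes n-1}}{F_t g_r h_s}\Expect{\mu}{F_t'g_r'h_s'}$ where $F_t'$ is embedding but $g_r'$ or $h_s'$ is \emph{not} the matching character, and conversely where $F_t'$ is non-embedding. Naively each such term is bounded by a $\beta$ with possibly-dropped degrees, but one cannot afford to lose a factor of $1-\eps$ every time without also verifying that the degree genuinely drops—and when the maximum is attained at the ``diagonal'' aligned term $(1,1,1)$ no gain is available, which is precisely why the statement is a \emph{max} of two terms rather than a clean contraction. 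Making the additive base case apply requires first arguing that $g$ and $h$ also have small mass on non-embedding components of coordinate $n$ (so $\kappa_2,\rho_2$ are small); this needs a short preparatory argument using that $g$ has small ${\sf NEStab}$—though in the homogeneous formulation here it is cleaner, since homogeneity forces the relevant mass to be concentrated appropriately. I would also need the alignment-of-characters lemma to be quantitative enough that the error in replacing $g_1'$ by $\overline{F_1'}\,\overline{h_1'}$ is absorbed into $\eps$; for that I expect to use the full strength of the saturation hypothesis (condition 4 of Theorem~\ref{thm:nonembed_deg_must_be_small_rephrase_maximal_relaxed}) so that the characters $\chi\circ\sigma$, $\chi\circ\gamma$, $\chi\circ\phi$ form genuinely orthonormal systems and the factorization $F_1' = \overline{g_1'}\,\overline{h_1'}$ holds exactly on the support.
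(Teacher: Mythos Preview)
Your overall plan---SVD along one coordinate, use $n\ge Ld$ to find a coordinate where $F$ has small non-embedding mass, exploit that embedding $F_t'$ factor as $\overline{\chi\circ\gamma}\cdot\overline{\chi\circ\phi}$ so only aligned triples survive, and invoke the additive base case (Claim~\ref{claim:additive_base_case}) to gain $(1-\eps)$ on non-embedding terms---is exactly the paper's approach, and the SVD statements you need are Claims~\ref{claim:svd} and~\ref{claim:svd_effective}.

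There is, however, a real gap in how you propose to control the non-embedding masses of $g$ and $h$ on the chosen coordinate. You suggest ``a short preparatory argument using that $g$ has small ${\sf NEStab}$,'' or that homogeneity forces the mass to concentrate. Neither is available: in Lemma~\ref{lem:reduce_beta} there is no ${\sf NEStab}$ hypothesis on $g$ (that belongs to Theorem~\ref{thm:nonembed_deg_must_be_small_rephrase_maximal_relaxed}, \emph{before} the reduction to the homogeneous statement), and the non-embedding degrees of $g,h$ are unconstrained in the definition of $\beta_{n,d,d'}$, so averaging over coordinates tells you nothing about them. The paper supplies a different mechanism: a case split on \emph{alignment} of the embedding masses (Lemma~\ref{lem:diagonal_small_beta}). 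If $\sum_{\chi}|\psi_\chi\kappa_\chi\rho_\chi|\le 1-\tau$, the bound follows immediately since the non-embedding part of $F$ has mass $\le\eta$. Otherwise a H\"older argument produces a single $\chi^\star$ with $|\psi_{\chi^\star}|,|\kappa_{\chi^\star}|,|\rho_{\chi^\star}|\ge 1-3\tau$, and \emph{now} one gets $\sum_{r\ne\chi^\star}|\kappa_r|^2,\sum_{s\ne\chi^\star}|\rho_s|^2\le 6\tau$ for free. This is what makes the ``both $r,s$ off-$\chi^\star$'' terms small (Lemma~\ref{lem:bound_small_beta}), and what lets the additive-base-case terms---which only handle the case where \emph{exactly one} of $r,s$ equals $\chi^\star$ (Lemma~\ref{lem:bound_additivebase_beta})---combine with the diagonal via a single Cauchy--Schwarz and the calculus check of Claim~\ref{claim:bound_calculus}. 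Without this alignment split you have no way to bound the mixed terms where $g_r',h_s'$ are both non-embedding: the additive base case does not apply to them, and you cannot control $|\kappa_r||\rho_s|$.
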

In words, Lemma~\ref{lem:reduce_beta} says that when we look at $\beta$, we can either decrease $n$ and leave the non-embedding degree $d$,
and the lower bound on the effective degree $d'$ to be the same and get a bound by a similar parameter $\beta$;
else we drop $d$ and $d'$ by $1$ but to compensate for that we gain a factor of $(1-\eps)$.

After proving
Lemma~\ref{lem:reduce_beta} we are going to iterate it and conclude that either we are done (by gaining sufficiently many factors of $1-\eps$), or
else we have reduced $n$ all the way down to $L d$, in which case the number of variables is linear in the non-embedding
degree of the $x$-function and thus nearly linear in the effective non-embedding degree.
The case that $n\approx L d$ will be the subject of discussion in Section~\ref{sec:prove_near_lin}.

\skipi
The rest of this section is devoted to the proof of Lemma~\ref{lem:reduce_beta}, and our argument closely follows~\cite[Section 5]{BKMcsp2}, with a
few additional complication due to the greater generality in our case. In a sense, one could think of the argument in~\cite[Section 5]{BKMcsp2} as
addressing the case that there is only the trivial embedding functions corresponding to the all $1$ character, hence the constant all $1$ function plays
a special role therein. In our context embedding functions play the role of the constant functions in the
setting of~\cite{BKMcsp2}, and non-embedding functions play the role of functions with average $0$ in the setting of~\cite{BKMcsp2}.
The presence of multiple Abelian embeddings however makes the argument here a bit more tricky, and this is ultimately the reason we needed
the notion of completely embedding homogenous functions.

\subsection{Preliminaries: the Additive Base Case and Some Simple Orthogonalities}
In our argument we are going to need to upper bound the absolute value of expectations of the form
\[
\Expect{(x,y,z)\sim \mu}{f(x)(g(y)+h(z))}
\]
for univariate functions. We may of course apply Cauchy-Schwarz to bound this, but importantly
the following claim shows that if $f$ is perpendicular to all embedding functions,
then we can gain a factor of $(1-c)$ over the trivial bound given by Cauchy-Schwarz.
\begin{claim}\label{claim:additive_base_case}
  Let $\Sigma$, $\Gamma$ and $\Phi$ be alphabets of size at most $m$, and let
  $\mu$ be a distribution as in Theorem~\ref{thm:nonembed_deg_must_be_small_rephrase_maximal_relaxed}.
  Then there exists $c = c(m,\alpha)>0$ such that for all functions $f\colon\Sigma\to\mathbb{C}$, $g\colon \Gamma\to\mathbb{C}$ and
  $h\colon \Phi\to\mathbb{C}$ such that $f\in {\sf Embed}_{\sigma}(\mu)^{\perp}$
  we have that
  \[
    \card{\Expect{(x,y,z)\sim \mu}{f(x)(g(y) + h(z))}}\leq (1-c)\norm{f}_2\norm{g+h}_2.
  \]
\end{claim}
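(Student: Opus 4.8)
\textbf{Proof plan for Claim~\ref{claim:additive_base_case}.}
The plan is to reduce the additive statement to the multiplicative relaxed base case that $\mu$ is assumed to satisfy. Write $G(x,y,z) = g(y)+h(z)$, though of course it is not a product function, so I cannot apply the relaxed base case directly to it. Instead I will argue by contradiction via compactness, essentially in the same style as the compactness argument used in Section~\ref{sec:modest_case_main}, and then (if a quantitative bound is genuinely needed here) note that the quantitative version follows by unravelling the compactness exactly as was done in Section~\ref{sec:relaxed_base_quan}; but for this lemma the statement only asks for \emph{some} $c(m,\alpha)>0$, so a compactness argument suffices provided one checks the dependence is only on $m$ and $\alpha$ (which it is, since $\mu$ ranges over a compact family of distributions with atoms $\geq \alpha$ on alphabets of size $\leq m$, and the relaxed base case constants depend only on $m,\alpha$).

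First I would normalize so that $\norm{f}_2 = \norm{g+h}_2 = 1$ and suppose for contradiction that there is a sequence of instances $(\mu_k, f_k, g_k, h_k)$ with $f_k\in{\sf Embed}_{\sigma}(\mu_k)^\perp$ and $|\Expect{(x,y,z)\sim\mu_k}{f_k(x)(g_k(y)+h_k(z))}|\to 1$. By compactness of the space of distributions (the support is a subset of a finite set, atoms bounded below by $\alpha$) and of the unit balls, pass to a limit $(\mu, f, g, h)$ with the same properties and with $|\Expect{(x,y,z)\sim\mu}{f(x)(g(y)+h(z))}| = 1$. Multiplying $f$ by a unit scalar we may take the expectation itself to equal $1$. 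Now the key point: by Cauchy--Schwarz, $1 = |\Expect{\mu}{f(x)(g(y)+h(z))}| \leq \norm{f}_2 \cdot \norm{(x,y,z)\mapsto g(y)+h(z)}_{2,\mu} = \norm{g+h}_2 = 1$, using that $\mu_{y,z}$ is uniform so $\norm{(x,y,z)\mapsto g(y)+h(z)}_{2,\mu}^2 = \Expect{y,z}{|g(y)+h(z)|^2} = \norm{g+h}_2^2$. Hence Cauchy--Schwarz is tight, so there is a unit scalar $\theta$ with $f(x) = \theta(\overline{g(y)+h(z)})$ on ${\sf supp}(\mu)$; absorbing $\theta$, $f(x) = \overline{g(y)} + \overline{h(z)}$ on ${\sf supp}(\mu)$. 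This is exactly an (infinite-group, additive) embedding relation of $\mu$: $f(x) - \overline{g(y)} - \overline{h(z)} = 0$.

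The main obstacle is the last deduction. Naively one wants to invoke Claim~\ref{claim:embedding_fns} to conclude $f\in{\sf Embed}_\sigma(\mu)$, contradicting $f\in{\sf Embed}_\sigma(\mu)^\perp$ together with $\norm{f}_2=1$. And indeed Claim~\ref{claim:embedding_fns} applies verbatim: its hypothesis is precisely that there are functions with $f(x)+g'(y)+h'(z)=0$ on ${\sf supp}(\mu)$ (take $g' = -\overline{g}$, $h' = -\overline{h}$), and its conclusion is $f\in{\sf Embed}_\sigma(\mu)$. The subtlety to double-check is that Claim~\ref{claim:embedding_fns} was stated for $\mu$ with \emph{saturated} master embeddings, which holds here by the standing assumptions in Theorem~\ref{thm:nonembed_deg_must_be_small_rephrase_maximal_relaxed}; this is why the relaxed-base-case hypotheses (and hence saturation) are part of the ambient setup in which Claim~\ref{claim:additive_base_case} is invoked. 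So $f\in{\sf Embed}_\sigma(\mu)$ and $f\perp{\sf Embed}_\sigma(\mu)$ forces $f\equiv 0$, contradicting $\norm{f}_2=1$. This completes the contradiction and hence the proof; if a polynomial quantitative dependence $c\tau^C$ is wanted later, it can be extracted by repeating the discretization/Dirichlet argument of Section~\ref{sec:relaxed_base_quan} with $\overline{g(y)}+\overline{h(z)}$ in place of the product $g(y)h(z)$, but that refinement is not needed for the statement as written.
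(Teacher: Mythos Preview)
Your proof is correct and is essentially the paper's own argument: normalize, take a limiting sequence by compactness over measures with atoms $\geq\alpha$, use tightness of Cauchy--Schwarz to get $f(x)=\theta(\overline{g(y)+h(z)})$ on ${\sf supp}(\mu)$, and invoke Claim~\ref{claim:embedding_fns} to conclude $f\in{\sf Embed}_\sigma(\mu)$, contradicting $f\perp{\sf Embed}_\sigma(\mu)$ with $\norm{f}_2=1$. Your opening sentence about reducing to the multiplicative relaxed base case is a red herring (you immediately abandon it, and the paper does not use it either); the remark about $\mu_{y,z}$ being uniform is also unnecessary here, since $\norm{(y,z)\mapsto g(y)+h(z)}_{L_2(\mu)}=\norm{g+h}_{L_2(\mu_{y,z})}$ holds just by marginalization.
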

\begin{proof}
  Assume this is not the case. Thus, we may find a sequence of functions $(f_m,g_m,h_m)$ and measures $\mu_m$
  such that $\norm{f_m}_2 = 1$, $\norm{g_m+h_m}_2 = 1$, $f_m\in {\sf Embed}_{\sigma}(\mu_m)^{\perp}$, and
  \[
  \card{\Expect{(x,y,z)\sim \mu_m}{f_m(x)(g_m(y) + h_m(z))}}\geq 1-\frac{1}{m}.
  \]
  Passing to subsequences, we may assume that ${\sf supp}(\mu_m)$ is the same for all $m$,
  and passing to further subsequences we may assume that $f_m$ converges to a function $f$,
  $g_m$ converges to a function $g$ and $h_m$ converges to a function $h$ and $\mu_m$ converges to a measure $\mu$.
  We get that $f\in {\sf Embed}_{\sigma}(\mu)^{\perp}$,
  $\norm{f}_2 = \norm{g+h}_2 = 1$ and
  \[
  \card{\Expect{(x,y,z)\sim \mu}{f(x)(g(y) + h(z))}}\geq 1.
  \]
  On the other hand, by Cauchy-Schwarz we get that
  \[
  \card{\Expect{(x,y,z)\sim \mu}{f(x)(g(y) + h(z))}}\leq \norm{f}_2\norm{g+h}_2=1,
  \]
  hence Cauchy-Schwarz is tight. Therefore there is $\theta\in \mathbb{C}$ of
  absolute value $1$ such that $f(x) = \theta(\overline{g(y) + h(z)})$ for all $(x,y,z)\in \mathbb{C}$.
  Claim~\ref{claim:embedding_fns} implies now that $f\in {\sf Embed}_{\sigma}(\mu)$, and this
  means that $f$ must be identically $0$ in contradiction to the fact that $\norm{f}_2=1$.
\end{proof}

Next, we have the following claim asserts that embedding functions in $x$ are perpendicular to
non-embedding functions in either $y$ or $z$.
\begin{claim}\label{claim:ortho_of_embed_nonembed}
  Let $\Sigma$, $\Gamma$ and $\Phi$ be alphabets of size at most $m$, and let
  $\mu$ be a distribution as in Theorem~\ref{thm:nonembed_deg_must_be_small_rephrase_maximal_relaxed}.
  \begin{enumerate}
    \item If $f\in {\sf Embed}_{\sigma}(\mu)$, $g\in {\sf Embed}_{\gamma}(\mu)^{\perp}$
    and $h$ is any function, then $\Expect{(x,y,z)\sim \mu}{f(x)g(y)h(z)} = 0$.
    \item If $f\in {\sf Embed}_{\sigma}(\mu)$ and $h\in {\sf Embed}_{\phi}(\mu)^{\perp}$
    and $g$ is any function, then $\Expect{(x,y,z)\sim \mu}{f(x)g(y)h(z)} = 0$.
    \item If $f\in {\sf Embed}_{\sigma}(\mu)$, $g\in {\sf Embed}_{\gamma}(\mu)$
    and $h\in {\sf Embed}_{\phi}(\mu)$ each have average $0$, then
    \[
        \Expect{(x,y,z)\sim \mu}{f(x)g(y)} = \Expect{(x,y,z)\sim \mu}{f(x)h(z)}
        =\Expect{(x,y,z)\sim \mu}{g(y)h(z)} = 0.
    \]
  \end{enumerate}
\end{claim}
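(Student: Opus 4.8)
\textbf{Proof proposal for Claim~\ref{claim:ortho_of_embed_nonembed}.}

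The plan is to exploit the structure of the master embedding and the fact that the distribution of $(\sigma(x),\gamma(y),\phi(z))$ has full support on $\sett{(a,b,c)\in H^3}{a+b+c=0}$, which in the setting of Theorem~\ref{thm:nonembed_deg_must_be_small_rephrase_maximal_relaxed} is in fact the \emph{uniform} distribution on that set, and moreover $\mu_{y,z}$ is uniform. First I would handle item~1. Since $f\in {\sf Embed}_{\sigma}(\mu)$, I may expand $f$ in the basis $\sett{\chi_\sigma}{\chi\in\hat H}$, so by linearity it suffices to prove the statement when $f = \chi_\sigma$ for a single character $\chi\in\hat H$. The key observation is that because $\sigma(x)+\gamma(y)+\phi(z)=0$ on the support of $\mu$, we have $\chi(\sigma(x)) = \chi(-\gamma(y)-\phi(z)) = \overline{\chi(\gamma(y))}\cdot\overline{\chi(\phi(z))}$ pointwise on ${\sf supp}(\mu)$. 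Hence $f(x)g(y)h(z) = \bigl(\overline{\chi_\gamma(y)}\,g(y)\bigr)\bigl(\overline{\chi_\phi(z)}\,h(z)\bigr)$ on the support, so
\[
\Expect{(x,y,z)\sim\mu}{f(x)g(y)h(z)}
= \Expect{(y,z)\sim\mu_{y,z}}{\bigl(\overline{\chi_\gamma(y)}\,g(y)\bigr)\bigl(\overline{\chi_\phi(z)}\,h(z)\bigr)}.
\]
Using that $\mu_{y,z}$ is a product (uniform) distribution, this factors as $\Expect{y}{\overline{\chi_\gamma(y)}\,g(y)}\cdot\Expect{z}{\overline{\chi_\phi(z)}\,h(z)}$. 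The first factor is $\inner{g}{\chi_\gamma}_{\mu_y}$ up to conjugation, and since $\chi_\gamma\in{\sf Embed}_\gamma(\mu)$ while $g\in{\sf Embed}_\gamma(\mu)^\perp$, this inner product vanishes; thus the whole expectation is $0$. Item~2 is identical with the roles of $y$ and $z$ (equivalently $\gamma$ and $\phi$) swapped, using $\chi(\sigma(x)) = \overline{\chi_\gamma(y)}\,\overline{\chi_\phi(z)}$ again and factoring off the $z$-inner product $\inner{h}{\chi_\phi}_{\mu_z}=0$.

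For item~3, the hypotheses are stronger: $f,g,h$ are embedding functions each of average $0$, and I must show the three \emph{pairwise} correlations vanish. Consider $\Expect{(x,y,z)\sim\mu}{f(x)g(y)}$. Expanding $f=\sum_{\chi}\widehat{f}(\chi)\chi_\sigma$ and $g=\sum_{\psi}\widehat{g}(\psi)\psi_\gamma$, and using $\chi_\sigma(x)=\overline{\chi_\gamma(y)}\,\overline{\chi_\phi(z)}$ on the support, each term becomes $\Expect{(y,z)}{\overline{\chi_\gamma(y)}\,\overline{\chi_\phi(z)}\,\psi_\gamma(y)} = \Expect{y}{\overline{\chi_\gamma(y)}\psi_\gamma(y)}\Expect{z}{\overline{\chi_\phi(z)}}$ by independence of $y,z$. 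Now $\Expect{z}{\overline{\chi_\phi(z)}}$ is (proportional to) $\inner{\chi_\phi}{1}$, which is nonzero only when $\chi_\phi$ is the trivial character, i.e.\ $\chi\circ\phi\equiv 1$; but since $\phi$ is saturated (its image is all of $H$), $\chi\circ\phi\equiv 1$ forces $\chi$ to be the trivial character of $H$, whence $\chi_\sigma\equiv 1$ is the constant function and $\widehat{f}(\chi)$ is the average of $f$, which is $0$ by hypothesis. So every surviving term has coefficient $0$, and $\Expect{(x,y,z)\sim\mu}{f(x)g(y)}=0$. The same computation with $f,h$ (factoring off $\Expect{y}{\overline{\chi_\gamma(y)}}$ and using saturation of $\gamma$) and, for $g,h$, a direct application of independence of $y$ and $z$ together with the averages being $0$, handles the remaining two identities.

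The only genuinely delicate point — and the main thing to get right — is the interplay between saturation and the vanishing of the ``stray'' single-variable expectation $\Expect{z}{\overline{\chi_\phi(z)}}$ (resp.\ over $y$): one must be careful that $\chi_\phi$ being orthogonal to the constants is equivalent to $\chi$ being nontrivial on $H$, which is exactly where the saturation hypothesis (Condition~4 of Theorem~\ref{thm:nonembed_deg_must_be_small_rephrase_maximal_relaxed}) enters, and then one pairs this with the average-zero hypothesis to kill the term. Everything else is a routine Fourier expansion combined with the pointwise identity $\sigma(x)+\gamma(y)+\phi(z)=0$ and the product structure of $\mu_{y,z}$; I do not expect any obstacle beyond bookkeeping. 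A remark worth noting is that items~1 and~2 do not actually need the average-zero assumption, only membership in the orthogonal complement of the embedding space on the appropriate side, which is why they are stated more generally.
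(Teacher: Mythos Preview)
Your proof is correct. For items~1 and~2 it is essentially identical to the paper's argument: reduce to a single character $\chi$, use $\chi(\sigma(x))=\overline{\chi(\gamma(y))}\,\overline{\chi(\phi(z))}$ on the support, factor using the product structure of $\mu_{y,z}$, and invoke orthogonality of $g$ (resp.\ $h$) to ${\sf Embed}_\gamma(\mu)$ (resp.\ ${\sf Embed}_\phi(\mu)$).

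For item~3 you take a slightly more computational route than the paper. The paper observes directly that since $f$ is a function of $\sigma(x)$ only and $g$ is a function of $\gamma(y)$ only, and since under $\mu$ the pair $(\sigma(x),\gamma(y))$ is uniform on $H^2$ (hence independent), one immediately gets $\Expect{}{f(x)g(y)}=\Expect{}{f(x)}\Expect{}{g(y)}=0$; this bypasses any Fourier expansion. Your argument---expanding in characters, substituting, and killing terms via $\Expect{z}{\overline{\chi_\phi(z)}}=0$ for nontrivial $\chi$---reaches the same conclusion and is equally valid, though a bit longer. One small point of presentation: the vanishing of $\Expect{z}{\overline{\chi_\phi(z)}}$ for nontrivial $\chi$ is really a consequence of the \emph{uniformity} of $\phi(z)$ on $H$ (Condition~4), not saturation per se; saturation only guarantees $\chi\circ\phi\not\equiv 1$, which is weaker. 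You do note the uniformity at the outset, so this is a matter of emphasis rather than a gap.
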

\begin{proof}
  For the first bullet, it suffices to prove this for any $f$ which is a basis element of ${\sf Embed}_{\sigma}(\mu)$.
  Thus, let $\chi\in \widehat{H}$ and consider $f(x) = \chi(\sigma(x))$. Then $f(x) = \chi(-\gamma(y) - \phi(z))$, so
  \begin{align*}
  \Expect{(x,y,z)\sim \mu}{f(x)g(y)h(z)}
  &=
  \Expect{(x,y,z)\sim \mu}{\chi(-\gamma(y))g(y)\chi(-\phi(z))h(z)}\\
  &=
  \Expect{(x,y,z)\sim \mu}{\chi(-\gamma(y))g(y)}
  \Expect{(x,y,z)\sim \mu}{\chi(-\phi(z))h(z)},
  \end{align*}
  where in the last transition we used the fact that $\mu_{y,z}$ is uniform.
  As $g$ is perpendicular to all embedding functions in $y$ we get that
  $\Expect{(x,y,z)\sim \mu}{\chi(-\gamma(y))g(y)} = 0$.

  The second bullet is identical.

  For the third bullet, we argue that $\Expect{(x,y,z)\sim \mu}{f(x)g(y)} = 0$ and the other two are identical.
  Note that $f$ and $g$ are functions only of $\sigma(x)$ and $\gamma(y)$. As $\sigma(x)$ and $\gamma(y)$ are independent
  under $(x,y,z)\sim \mu$, we get that the values $f(x)$ and $g(y)$ are independent and so
  $\Expect{(x,y,z)\sim \mu}{f(x)g(y)} = \Expect{(x,y,z)\sim \mu}{f(x)}\Expect{(x,y,z)\sim \mu}{g(y)} = 0$.
\end{proof}

\subsection{The Singular-Value Decomposition}
Next, our argument requires an appropriate singular value decomposition with respect to our notions of embedding degrees and non-embedding degrees,
and we have the following lemma. Let $I,J$ be a partition of $[n]$ where $\card{I} = n-1$ and $\card{J} = 1$.
    \begin{claim}\label{claim:svd}
        Let $\mu$ be as in Theorem~\ref{thm:nonembed_homogenous}.
        If $g\colon\Gamma^n\to\mathbb{C}$ is a completely embedding homogenous, non-embedding homogenous and $\norm{g}_2=1$, then we may write
		\[
		g(y) = \sum\limits_{\chi\in \hat{H}} \kappa_{\chi} g_{\chi}(y_I)g_{\chi}'(y_J) +
                \sum\limits_{r=1}^{s} \kappa_r g_r(y_I) g_r'(y_J),
		\]
        where letting $P = \sett{1\leq r\leq s}{\kappa_r\neq 0}\cup \sett{\chi\in\hat{H}}{\kappa_{\chi}\neq 0}$, we have
        $\kappa_r\geq 0$ for $r\in P$ and:
		\begin{enumerate}
            \item For $\chi\in\hat{H}$, $g_{\chi}' \in {\sf span}(\{\chi\circ \gamma\})$ and $g_{\chi}$ is completely homogenous embedding function
            and non-embedding homogenous of degree $d$.
			\item For $1\leq r\leq s$, $g_r$ is completely homogenous embedding function and non-embedding homogenous of degree $d-1$.
            Also, $g_r'$ is in ${\sf Embed}_{\gamma}(\mu)^{\perp}$.
            \item The set $\set{g_r\colon \Gamma^{I}\to\mathbb{C}}_{r\in P}$ is orthonormal.
            \item The set $\set{g_r'\colon \Gamma^{J}\to\mathbb{C}}_{r\in P}$ is orthonormal.
			\item $\sum\limits_{r\in P} \kappa_r^2 = 1$.
		\end{enumerate}
	\end{claim}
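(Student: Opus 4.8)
The plan is to obtain the decomposition by a standard singular-value decomposition of $g$ viewed as a ``matrix'' indexed by $y_I \in \Gamma^{I}$ on one side and $y_J \in \Gamma^{J}$ on the other, and then to argue that the block structure imposed by homogeneity and by the orthogonal decomposition $L_2(\Gamma;\mu_y) = {\sf Embed}_{\gamma}(\mu) \oplus {\sf Embed}_{\gamma}(\mu)^{\perp}$ respects this SVD. Concretely, first I would expand $g$ in the tensor basis $(B_{{\sf embed}}\cup B_{{\sf non-embed}}\cup B_{{\sf modest}})^{\otimes n}$ for the $\Gamma$-space, separating the last coordinate: write $g(y) = \sum_{\psi} g_{\psi}(y_I)\,\psi(y_J)$ where $\psi$ ranges over $B_{{\sf embed}}\cup B_{{\sf non-embed}}\cup B_{{\sf modest}}$ for coordinate $J$, and $g_{\psi}\colon \Gamma^{I}\to\mathbb{C}$ is the corresponding ``slice''. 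Because $g$ is completely embedding homogenous of some degree $D = (D_a)_{a\in\hat H}$ and non-embedding homogenous of non-embedding degree $d$, the slice $g_{\psi}$ is itself completely embedding homogenous and non-embedding homogenous: if $\psi = \chi\circ\gamma \in B_{{\sf embed}}$ corresponds to character $\chi$, then $g_{\chi\circ\gamma}$ has embedding degrees $D$ with $D_\chi$ decreased by $1$ and non-embedding degree exactly $d$; if $\psi \in B_{{\sf non-embed}}\cup B_{{\sf modest}}$ (i.e. $\psi \in {\sf Embed}_{\gamma}(\mu)^{\perp}$), then $g_{\psi}$ has embedding degrees $D$ and non-embedding degree exactly $d-1$. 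This is where the homogeneity hypotheses are used, and it is the reason the embedding part must be split character-by-character rather than lumped together.

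Next I would group the terms. Collect the embedding part $\sum_{\chi\in\hat H} g_{\chi\circ\gamma}(y_I)\,(\chi\circ\gamma)(y_J)$ — here each $g_{\chi}'$ is already forced to be a scalar multiple of the single basis function $\chi\circ\gamma$, so after normalizing we may write this as $\sum_{\chi} \kappa_{\chi}\, g_{\chi}(y_I)\, g_{\chi}'(y_J)$ with $g_{\chi}' \in {\sf span}(\{\chi\circ\gamma\})$ of unit $2$-norm, $\kappa_{\chi}\ge 0$, and $g_{\chi}$ of unit $2$-norm (this gives item (1)). For the non-embedding part, $g^{\perp}(y) := \sum_{\psi \in {\sf Embed}_{\gamma}(\mu)^{\perp}} g_{\psi}(y_I)\,\psi(y_J)$ is a function whose slices all have non-embedding degree $d-1$ and whose $J$-coordinate component lies in ${\sf Embed}_{\gamma}(\mu)^{\perp}$; now apply the ordinary finite-dimensional SVD to the linear map $y_J \mapsto g_{\psi}$-coefficients (equivalently, to the Gram operator $\sum_\psi g_\psi \otimes \overline{g_\psi}$ on the finite-dimensional space spanned by the $y_I$-slices, or to its counterpart on the span of the $\psi$'s) to produce an orthonormal family $\{g_r'\}$ in ${\sf Embed}_{\gamma}(\mu)^{\perp}\subseteq L_2(\Gamma^J)$, an orthonormal family $\{g_r\}$ in $L_2(\Gamma^I)$, and singular values $\kappa_r \ge 0$ with $g^{\perp} = \sum_r \kappa_r g_r(y_I) g_r'(y_J)$. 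Since every $g_{\psi}$ has non-embedding degree exactly $d-1$, each $g_r$ (being a linear combination of the $g_\psi$'s) is again completely embedding homogenous and non-embedding homogenous of degree $d-1$, giving item (2).

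Finally I would verify orthonormality and the norm identity. Items (3) and (4) hold on the $\{g_r\}_{r}$ and $\{g_r'\}_r$ blocks directly from the SVD; I must also check that the embedding block and the non-embedding block do not interfere, i.e. that $\{g_{\chi}\}_{\chi}$ together with $\{g_r\}_r$ can be taken orthonormal and likewise on the $J$ side. On the $J$ side this is automatic: the $g_{\chi}' \in {\sf Embed}_{\gamma}(\mu)$ while the $g_r' \in {\sf Embed}_{\gamma}(\mu)^{\perp}$, so the two families are orthogonal and each is orthonormal (the $\chi\circ\gamma$ are orthonormal, the $g_r'$ by SVD). On the $I$ side, the embedding slices $g_{\chi}$ have non-embedding degree $d$ while the non-embedding slices $g_r$ have non-embedding degree $d-1$, so by Parseval these two families lie in orthogonal subspaces; within the embedding family the $g_{\chi}$ for distinct $\chi$ live in spaces with different embedding-degree vectors (the $\chi$-coordinate differs), hence are orthogonal, so after normalization $\{g_\chi\}\cup\{g_r\}$ is orthonormal as required (relabel so that $P$ indexes exactly the nonzero $\kappa$'s). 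The norm identity $\sum_{r\in P}\kappa_r^2 = 1$ then follows from Parseval applied to $\norm{g}_2^2 = 1$, since the decomposition is orthogonal. The only genuinely delicate point is bookkeeping the three-way orthogonality (embedding vs.\ non-embedding vs.\ the separation into distinct characters) and confirming that homogeneity is preserved under taking slices and under the SVD; everything else is routine linear algebra. I do not expect a real obstacle, only care with notation.
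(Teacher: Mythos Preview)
Your proposal is correct and essentially the same approach as the paper's proof. The paper views $g$ as a matrix $M\in\mathbb{C}^{\Gamma^J\times\Gamma^I}$, applies the SVD to $MM^*$, and uses a standalone lemma (Claim~\ref{claim:svd_invariant_spaces}) to show that each ${\sf span}(\{\chi\circ\gamma\})$ and ${\sf Embed}_\gamma(\mu)^{\perp}$ are invariant subspaces of $MM^*$; you instead split off the $J$-coordinate in the tensor basis first and observe directly that the embedding block is already diagonal (each character span being one-dimensional) so only the non-embedding block needs an SVD. The bookkeeping you do to verify cross-block orthogonality on the $I$ side (via mismatched non-embedding degrees and mismatched embedding-degree vectors) is exactly the content of the paper's invariant-subspace lemma unpacked by hand.
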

	\begin{proof}
		The proof is deferred to Section~\ref{sec:svd_apx}.
	\end{proof}

	We next state an SVD decomposition statement that addresses the function $F$. The idea is the same as in Claim~\ref{claim:svd}, and
    the only difference is that we have an additional notion of effective non-embedding degree.
	\begin{claim}\label{claim:svd_effective}
		Suppose $F\colon\Gamma^n\times \Phi^n\to\mathbb{C}$ is a function which is constant on connected components,
        is completely embedding homogenous, is non-embedding homogenous of degree $d$, and has effective degree at least $d'$. If $\norm{F}_2 = 1$,
        then we may write
		\[
		F(y,z) =
        \sum\limits_{\chi\in\hat{H}} \psi_{\chi} F_{\chi}(y_I,z_I) F_{\chi}'(y_J,z_J)
        +\sum\limits_{t=1}^{s} \psi_t F_t(y_I,z_I) F_t'(y_J,z_J),
		\]
		and $P = \sett{1\leq t\leq s}{\psi_t\neq 0}\cup\sett{\chi\in \hat{H}}{\psi_{\chi}\neq 0}$, where $\psi_t\geq 0$ for $t\in P$ as well as:
		\begin{enumerate}
            \item For $\chi\in\hat{H}$, $F_{\chi}' \in {\sf span}(\{W\circ \chi\circ \sigma\})$
            and $F_{\chi}\colon \Gamma^{I}\times \Phi^{I}\to\mathbb{C}$ is a completely embedding homogenous,
            non-embedding homogenous of degree $d$ and effective non-embedding degree at least $d'$.
			\item For $1\leq t\leq s$, $F_t$ is completely embedding homogenous,
            non-embedding homogenous of degree $d-1$ and the effective non-embedding degree is at least $d'-1$.
            Also, $F_t'$ is orthogonal to all embedding functions.
			\item The functions $F_t$ and $F_t'$ are constant on connected components for all $t$.
            \item The set $\set{F_t\colon \Gamma^{I}\times\Phi^{I}\to\mathbb{C}}_{t\in P}$ is orthonormal.
			\item The set $\set{F_t'\colon \Gamma^{J}\times \Phi^J\to\mathbb{C}}_{t\in P}$ is orthonormal.
			\item $\sum\limits_{t\in P}\psi_t^2 = 1$.
		\end{enumerate}
	\end{claim}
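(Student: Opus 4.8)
The plan is to mirror the proof of the SVD decomposition for $g$ (Claim~\ref{claim:svd}), adapting it to carry the extra bookkeeping of effective non-embedding degree and the ``constant on connected components'' constraint. First I would recall that $F\colon \Gamma^n\times\Phi^n\to\mathbb{C}$ being constant on connected components means $F = Wf$ for a unique $f\colon\Sigma^n\to\mathbb{C}$, and that $f$ inherits complete embedding homogeneity, non-embedding homogeneity of degree $d$, and effective degree $\geq d'$ from $F$ via Lemma~\ref{lem:constant_on_ccs_equiv}. I would then do the singular-value decomposition at the level of $f$ on the bipartition $I\cup J$ with $|J|=1$, and push it through $W$ at the end (using that $W$ is linear and an isometry, so orthonormal sets map to orthonormal sets and inner products are preserved).

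Concretely, expand $f$ in the tensorized basis $(B_{\sf embed}\cup B_{\sf non-embed}\cup B_{\sf modest})^{\otimes n}$ and group the $J$-coordinate factor. Each basis function on the single coordinate $J$ is either in $B_{\sf embed}$ (i.e.\ of the form $\chi\circ\sigma$ for $\chi\in\hat H$), or in $B_{\sf non-embed}\cup B_{\sf modest}$ (i.e.\ orthogonal to all embedding functions). In the first case, because $f$ is completely embedding homogenous, fixing the value $\chi$ on coordinate $J$ forces the $I$-part to have a fixed embedding-degree profile $\{D_a\}$ with $D_{\chi}$ reduced by one relative to $f$'s profile; hence the $I$-part is completely embedding homogenous, still non-embedding homogenous of degree $d$, and still has effective degree $\geq d'$ (the modest count is unaffected since the $J$-factor is an embedding function). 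This collects into the terms $\psi_\chi F_\chi(y_I,z_I)F_\chi'(y_J,z_J)$ after applying $W$. In the second case, the $J$-factor lies in ${\sf Embed}_\gamma(\mu)^\perp$ (more precisely, its image is orthogonal to embeddings), the $I$-part has non-embedding degree exactly $d-1$; and whether the $J$-factor is in $B_{\sf non-embed}$ or $B_{\sf modest}$ the $I$-part has effective degree $\geq d'-1$ (only dropping by one if the $J$-factor was modest). Grouping all these contributions and performing a standard singular-value decomposition of the resulting bilinear form in the $(I\text{-index}, J\text{-index})$ pairing yields the orthonormal families $\{F_t\}$, $\{F_t'\}$ with nonnegative coefficients $\psi_t$; Parseval plus $\norm F_2 = 1$ gives $\sum_{t\in P}\psi_t^2 = 1$. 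The fact that $F_t$ and $F_t'$ are constant on connected components is automatic since they are built as linear combinations of $W\chi$'s, using Lemma~\ref{lem:constant_on_ccs_equiv}(3).

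The one point requiring a little care — and what I expect to be the main obstacle — is ensuring that the complete embedding homogeneity is preserved coordinate-by-coordinate under the SVD, rather than merely the coarser non-embedding homogeneity. The SVD mixes together basis monomials of the $I$-part, so a priori a singular vector $F_t$ could be a combination of monomials with different $\{D_a\}$ profiles. The resolution is that we should first split $f$ according to the (finitely many) possible embedding-degree profiles on the $I$-coordinates — for the embedding-case terms all profiles are forced to a single one by homogeneity of $f$, and for the non-embedding-case terms there is a single profile as well once we also fix which character $a$ was ``spent'' on a coordinate in $I$ versus which coordinate in $J$ carried the non-embedding factor; since $F$ is completely embedding homogenous, the global profile is fixed, so at most one profile is compatible. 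Thus within each of the two cases all monomials feeding the SVD share a common $\{D_a\}$ profile on $I$, and hence every singular vector $F_t$ (resp.\ $F_\chi$) is completely embedding homogenous. The rest — orthonormality of $\{F_t\}$ and $\{F_t'\}$, the degree claims, and $\sum\psi_t^2=1$ — then follow exactly as in the proof of Claim~\ref{claim:svd}, and I would simply refer to Section~\ref{sec:svd_apx} for those routine verifications, noting only the additional observation about the effective degree bookkeeping above.
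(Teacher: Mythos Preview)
Your proposal is correct and follows essentially the same route as the paper. The only cosmetic difference is that you pull back to $f=W^{-1}F$ on $\Sigma^n$ and do the SVD there before pushing through the isometry $W$, whereas the paper works directly with $F$ as a matrix on $(\Gamma\times\Phi)^J\times(\Gamma\times\Phi)^I$ and invokes Claim~\ref{claim:svd_invariant_spaces} to get the invariant-subspace structure; your worry about preserving complete embedding homogeneity under the SVD is resolved in the paper by the same observation you make (fixing the $J$-factor forces a unique embedding-degree profile on the $I$-part), just stated more tersely.
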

	\begin{proof}
        The proof is deferred to Section~\ref{sec:svd_apx}.
	\end{proof}

    We will need the following claim, stating a connection between the coefficients $\psi_t$ in the SVD decomposition and our notions of
    non-embedding influence.
    \begin{claim}\label{claim:coefs_svd_effective}
		Let $F\colon\Gamma^n\times \Phi^n\to\mathbb{C}$ be a function which is constant on connected components.
        Suppose that $F$ is completely embedding homogenous, non-embedding homogenous of degree $d$, the effective degree is at least
        $d'$, and $\norm{F}_2=1$. Write
		\[
		F(y,z) =
        \sum\limits_{\chi\in\hat{H}} \psi_{\chi} F_{\chi}(y_I,z_I) F_{\chi}'(y_J,z_J)
        +\sum\limits_{t\in T} \psi_t F_t(y_I,z_I) F_t'(y_J,z_J).
		\]
		as in Claim~\ref{claim:svd_effective}. Then if $j$ is the unique variable in the set $J$ in the partition $[n] = I\cup J$, then
		\[
		\sum\limits_{t\in T}\psi_t^2 = \frac{1}{2}I_{j,\text{non-embed}}[F].
		\]
	\end{claim}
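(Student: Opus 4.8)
The plan is to verify Claim~\ref{claim:coefs_svd_effective} by a direct Parseval-style computation, using the orthonormality properties guaranteed by the SVD decomposition (Claim~\ref{claim:svd_effective}) together with the Efron–Stein/monomial-basis formula for non-embedding influence (Fact~\ref{fact:nonembed_inf_formula}, first bullet). First I would recall that by Fact~\ref{fact:nonembed_inf_formula}(1), applied to the function $f\colon\Sigma^n\to\mathbb{C}$ with $F = Wf$ and using that $W$ is an isometry preserving the monomial structure, we have
\[
I_{j,\text{non-embed}}[F] = 2\sum\limits_{\chi:\chi_j\not\in B_{{\sf embed}}}\card{\widehat{F}(W\chi)}^2,
\]
so the task reduces to showing $\sum_{\chi:\chi_j\not\in B_{{\sf embed}}}\card{\widehat{F}(W\chi)}^2 = \sum_{t\in T}\psi_t^2$.

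Next I would use the decomposition from Claim~\ref{claim:svd_effective}. The key structural point is the split of the $j$-th tensor factor: in the terms $\psi_\chi F_\chi F_\chi'$ the $j$-coordinate function $F_\chi'$ lies in ${\sf span}(\{W\circ\chi\circ\sigma\})\subseteq W(B_{{\sf embed}})$, i.e. it is (a multiple of) an embedding monomial in the $j$-th coordinate; whereas in the terms $\psi_t F_t F_t'$ the function $F_t'$ is orthogonal to all embedding functions, hence lies in $W({\sf span}(B_{{\sf non-embed}}\cup B_{{\sf modest}}))$, i.e. it is a combination of monomials whose $j$-th factor is \emph{not} in $B_{{\sf embed}}$. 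Therefore, when one expands $F$ fully in the monomial basis $\{W\chi\}$, the monomials with $\chi_j\not\in B_{{\sf embed}}$ receive contributions only from the second sum $\sum_{t\in T}\psi_t F_t F_t'$. Since $\{F_t\}_{t\in P}$ is orthonormal (on $\Gamma^I\times\Phi^I$), $\{F_t'\}_{t\in P}$ is orthonormal (on $\Gamma^J\times\Phi^J$), and these live on disjoint coordinate blocks, the functions $\{F_t\otimes F_t'\}_{t\in T}$ form an orthonormal system; expanding each $F_t\otimes F_t'$ in the monomial basis and applying Parseval gives
\[
\sum\limits_{\chi:\chi_j\not\in B_{{\sf embed}}}\card{\widehat{F}(W\chi)}^2
= \norm{\,\sum\limits_{t\in T}\psi_t F_t\otimes F_t'\,}_2^2
= \sum\limits_{t\in T}\psi_t^2,
\]
where the last equality uses orthonormality of $\{F_t\otimes F_t'\}$. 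Combining with the influence formula yields $\sum_{t\in T}\psi_t^2 = \tfrac12 I_{j,\text{non-embed}}[F]$, as claimed.

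The one point requiring a little care — and the main (mild) obstacle — is making rigorous the assertion that the monomial supports of the two sums are disjoint in the $j$-th coordinate, so that no cross terms appear and no cancellation occurs. This follows from Claim~\ref{claim:svd_effective}(1) and (2): $F_\chi'$ is literally in the one-dimensional span of the embedding monomial $W\circ\chi\circ\sigma$ in coordinate $j$, while $F_t'$ is orthogonal to \emph{every} embedding function in coordinate $j$, hence its monomial expansion uses only basis elements of $B_{{\sf non-embed}}\cup B_{{\sf modest}}$ in that coordinate. Thus for any monomial $W\chi$ with $\chi_j\not\in B_{{\sf embed}}$, its coefficient $\widehat F(W\chi)$ equals its coefficient inside $\sum_{t\in T}\psi_t F_t\otimes F_t'$ alone, and one finishes by Parseval as above. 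I would also note in passing that the coefficients $\kappa_\chi,\psi_\chi$ and the indexing set $P$ versus $T$ should be handled consistently with the notation of Claim~\ref{claim:svd_effective}, but this is purely bookkeeping.
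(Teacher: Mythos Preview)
Your proof is correct and arrives at the same conclusion as the paper, but via a somewhat different route. The paper proceeds by direct computation from the \emph{definition} of non-embedding influence (Definition~\ref{def:non_embed_inf}): it writes $F=Wf$, plugs the SVD decomposition $f=\sum_t \psi_t f_t f_t'$ into $\Expect{x',a,b}{|f(x',a)-f(x',b)|^2}$ with $\sigma(a)=\sigma(b)$, and then shows term-by-term that $\Expect{}{|f_t'(a)-f_t'(b)|^2}=0$ for $t\in\hat H$ (since $f_t'$ is an embedding function, hence constant on $\sigma$-fibres) and $=2$ for $t\in T$ (since $f_t'$ is orthogonal to every indicator $1_{\sigma(\cdot)=w}$, so the cross term vanishes). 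You instead invoke the Parseval-style formula of Fact~\ref{fact:nonembed_inf_formula}(1) up front, observe that the SVD splits $F$ cleanly into a ``$j$-th coordinate embedding'' piece and a ``$j$-th coordinate non-embedding'' piece, and finish with orthonormality of $\{F_t\otimes F_t'\}_{t\in T}$.

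Both arguments hinge on the same structural fact from Claim~\ref{claim:svd_effective}, namely that $F_\chi'$ lies in the embedding space while $F_t'$ (for $t\in T$) is orthogonal to it. Your approach is arguably cleaner since it avoids re-deriving the vanishing of $\cExpect{a,b}{\sigma(a)=\sigma(b)}{f_t'(a)\overline{f_t'(b)}}$ (that computation is already packaged inside the proof of Fact~\ref{fact:nonembed_inf_formula}). The paper's approach is more self-contained and makes the ``embedding functions are constant on fibres / non-embedding functions average to zero on fibres'' mechanism explicit at the point of use.
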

	\begin{proof}
        Write $F = Wf$ for some $f\colon\Sigma^n\to\mathbb{C}$ so that $I_{j, \text{non-embed}}[F] = I_{j, \text{non-embed}}[f]$,
        and note that from the decomposition of $F$ we get an analogous decomposition of $f$ where $F_t = Wf_t$, $F_t' = Wf_t$ for all $t$.

        Consider $I_{j, \text{non-embed}}[F]$; sampling $x\sim \mu_x^{n-1}$ and $a,b\sim \mu_x$ conditioned on $\sigma(a) = \sigma(b)$
        we get that
        \[
			I_{j, \text{non-embed}}[f]
			= \sum\limits_{t\in T\cup \hat{H}} \kappa_{t}^2\cExpect{a, b\sim\mu_x}{\sigma(a) = \sigma(b)}{(f_{t}'(a) - g_{t}'(b))\overline{(f_{t}'(a) - f_{t}'(b))}}.
		\]
        For $t\in \hat{H}$ we have that $f_t'(a) = f_t'(b)$ and so the expectation is $0$.
        For $t\in T$, computing the expectation by expanding it we get it is equal to
        \[
        2\norm{f_t'}_2^2
        -
        2\cExpect{a, b\sim \mu}{\sigma(a) = \sigma(b)}{f_{t}'(a)\overline{f_{t}'(b)}}
        =2-2\cExpect{a, b\sim \mu}{\sigma(a) = \sigma(b)}{f_{t}'(a)\overline{f_{t}'(b)}},
        \]
        and we argue that last expectation is $0$. Indeed, let $\mathcal{D}$ be the distribution over the group $H$ of $\sigma(a)$ where we sample $a\sim \mu_x$,
        we have that the expectation is equal to
        \[
        \Expect{w\sim \mathcal{D}}{\card{\cExpect{a}{\sigma(a) = w}{f_t'(a)}}^2}
        =\Expect{w\sim \mathcal{D}}{\frac{\card{\inner{f_t'}{1_{\sigma(\cdot) = w}}}^2}{\card{\Expect{a}{1_{\sigma(a) = w}}}^2}}
        =\Expect{w\sim \mathcal{D}}{\frac{0}{\card{\Expect{a}{1_{\sigma(a) = w}}}^2}}
        =0,
        \]
        where we used the fact that $f_t'$ is orthogonal to the embedding function $1_{\sigma(\cdot) = w}$.
	\end{proof}

\subsection{Proof of Lemma~\ref{lem:reduce_beta}}
    In this section, we give the formal proof of Lemma~\ref{lem:reduce_beta}. Let $F, g$ and $h$ be functions of $2$-norm equal to $1$
    achieving the value $\beta_{n,d,d'}$. Using
	the singular-value decomposition we may write $F$, $g$ and $h$ as a sum of functions satisfying some orthogonality
	properties (see Claim~\ref{claim:svd} and Claim~\ref{claim:svd_effective} for precise statements):

	\begin{align}\label{eq:svd_main}
		&F(y,z) =
        \sum\limits_{\chi\in \hat{H}} \psi_{\chi} F_{\chi}(y_I,z_I) F_{\chi}'(y_J,z_J)
        +\sum\limits_{t\in T} \psi_t F_t(y_I,z_I) F_t'(y_J,z_J),\notag\\
		&g(\y) = \sum\limits_{\pi\in \hat{H}}\kappa_{\pi} g_{\pi}(y_I)g_{\pi}'(y_J) +
                \sum\limits_{r\in R} \kappa_r g_r(y_I) g_r'(y_J), \notag\\
		&h(z) = \sum\limits_{\lambda\in\hat{H}}\rho_{\lambda} h_{\lambda}(z_I) h_{\lambda}'(z_I)
                + \sum\limits_{s\in S} \rho_s h_s(y_I) h_s'(y_J).
	\end{align}
	Here, each one of the sets $\{F_t\}_{t\in T}\cup\{F_{\chi}\}_{\chi\in\hat{H}}$,
    $\{F_t'\}_{t\in T}\cup\{F_{\chi}'\}_{\chi\in\hat{H}}$,
    $\{g_r\}_{r\in R}\cup \{g_{\pi}\}_{\pi\in\hat{H}}$,
    $\{g_r'\}_{r\in R}\cup \{g_{\pi}'\}_{\pi\in\hat{H}}$,
    $\{h_s\}_{s\in S}\cup\{h_{\lambda}\}_{\lambda\in\hat{H}}$
    and
    $\{h_s'\}_{s\in S}\cup\{h_{\lambda}'\}_{\lambda\in\hat{H}}$ is orthonormal.
    By multiplying each one of the coefficients $\psi_t, \kappa_r$ and $\rho_s$ by an appropriate complex number of absolute value $1$,
    we will assume henceforth that $F_{\chi}' = W\circ \chi\circ \sigma$,
    $g_{\pi}' = \pi\circ \gamma$
    and $h_{\lambda}' = \lambda\circ \phi$.
    Thus, we have that
    \begin{align*}
	\Expect{(x,y,z)\sim\mu^{\otimes n}}{F(y,z)g(y)h(z)}
	=
	\sum\limits_{\substack{r\in R\cup\hat{H}\\ s\in S\cup\hat{H} \\ t\in T\cup\hat{H}}}\psi_t\kappa_r\rho_s
	\hspace{-2ex}\Expect{(x,y,z)\sim\mu^{\otimes n}}{F_t(y,z)g_r(y)h_s(z)}
	\hspace{-2ex}\Expect{(x,y,z)\sim\mu^{\otimes n}}{F_t'(y,z)g_r'(y)h_s'(z)}.
	\end{align*}
	It will be convenient for us to denote
    \[
    \widehat{F_t}(r,s) = \Expect{(x,y,z)\sim\mu^{\otimes (n-1)}}{F_t(y,z)g_r(y)h_s(z)},
    \qquad\qquad
	\widehat{F_t'}(r,s) = \Expect{(x,y,z)\sim\mu}{F_t'(y,z)g_r'(y)h_s'(z)}.
    \]
    This is justified because $(\overline{g_r} \overline{h_s})_{r\in R, s\in S}$
	form an orthonormal set in $L_2(y_I, z_I;\mu_{y,z}^{\otimes I})$, and it can be completed to an orthonormal basis,
	in which case the coefficient $\widehat{F_t}(r,s)$ appear in front of $\overline{g_r}\overline{h_s}$ in the representation of $F_t$.
	Thus, we get that
	\begin{equation}\label{eq:project_eq}
		\Expect{(x,y,z)\sim\mu^{\otimes n}}{F(y,z)g(y)h(z)}
		=
		\sum\limits_{\substack{r\in R\cup\hat{H}\\ s\in S\cup\hat{H} \\ t\in T\cup\hat{H}}}\psi_t\kappa_r\rho_s
		\widehat{F_t}(r,s)\widehat{F_t'}(r,s),
	\end{equation}

	\subsubsection{The parameters and choosing the partition}
	We will use several parameters throughout this section, obeying the following relations:
	\begin{equation}\label{eq:hier}
		0 \ll \eps\ll \eta\ll \tau \ll c\ll m^{-1},\alpha\leq 1.
	\end{equation}

    We need to choose the partition $I,J$ so that the mass of $F$ on the non-embedding components is small, namely
    \begin{equation}\label{eq:small_nonembed_inf}
    \sum\limits_{t \in T}\card{\psi_t}^2 \leq \eta,
    \end{equation}
    and next show that as long as $n$ is much larger than $d$ this is possible. Indeed,
    if $n\geq \frac{10}{\eta}d$, then choosing the partition randomly we have by
    Claim~\ref{claim:coefs_svd_effective} and Fact~\ref{fact:nonembed_inf_formula} that
    \[
    \Expect{I,J}{\sum\limits_{t\in T}\card{\psi_t}^2} =
    \frac{1}{2}\Expect{J=\{j\}}{I_{j,\text{non-embed}}[F]}\leq \frac{d}{n}\leq \frac{\eta}{10},
    \]
	so by Markov's inequality we get that
	\[
	\Prob{I,J}{\eqref{eq:small_nonembed_inf}\text{ fails}}\leq \frac{1}{10} < 1.
	\]
	We may thus find a partition $[n]=I\cup J$ with $\card{J} = 1$ such that~\eqref{eq:small_nonembed_inf} holds.
    We fix this partition henceforth.
    	
	\subsubsection{The main inductive argument}
	We are going to need to split the expression from~\eqref{eq:project_eq} into various sums and towards this end we start with a few observations
    that asserts that various terms in~\eqref{eq:project_eq} are $0$.
    \begin{observation}
    With the setup above:
	\begin{enumerate}
        \item Consider $\chi,\pi,\lambda\in \hat{H}$, and note that $\widehat{F}_{\chi}'(\pi,\lambda) = 1$ if $\chi = \pi = \lambda$,
        and $\widehat{F}_{\chi}'(\pi,\lambda) = 0$ otherwise. Indeed,
        \[
        \widehat{F}_{\chi}'(\pi,\lambda)
        =\Expect{(x,y,z)\sim \mu}{F_{\chi}'(y,z)g_{\pi}'(y)h_{\phi}'(z)}
        =\Expect{(x,y,z)\sim \mu}{\chi(\sigma(x))\pi(\gamma(y))\lambda(\phi(z))}.
        \]
        As $(\sigma(x),\gamma(y),\phi(z))$ is distributed uniformly among all $(a,b,c)\in H^3$ such that $a+b+c = 0$,
        we get that expectation is $0$ if not all of $\chi,\pi,\lambda$ are equal, and $1$ if they are equal.
        \item For $\chi\in \hat{H}$ and $r,s$ such that either $r\in R$ or $s\in S$ we have that $\widehat{F}_{\chi}'(r,s) = 0$.
        This is thanks to Claim~\ref{claim:ortho_of_embed_nonembed}.
        \item For $t\in T$, we have that $\widehat{F}_t'(\pi,\lambda) = 0$ if $\pi = \lambda$. Indeed, write $F_t' = W\circ f_t'$, then
        \begin{align*}
        \widehat{F}_{\chi}'(\pi,\lambda)
        =\Expect{(x,y,z)\sim \mu}{f_t'(x)\pi(\gamma(y))\lambda(\phi(z))}
        &=\Expect{(x,y,z)\sim \mu}{f_t'(x)\pi(\gamma(y)+\phi(z))}\\
        &=\Expect{(x,y,z)\sim \mu}{f_t'(x)\pi(-\sigma(x))}\\
        &=\inner{f_t'}{\pi\circ \sigma},
        \end{align*}
        which is $0$ as $f_t'$ is orthogonal to embedding functions.
	\end{enumerate}
\end{observation}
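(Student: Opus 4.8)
The plan is to verify the three vanishing claims one at a time, in each case expanding the relevant quantity $\widehat{F_t'}(r,s) = \Expect{(x,y,z)\sim \mu}{F_t'(y,z)g_r'(y)h_s'(z)}$ using the normalizations of the singular-value factors recorded just before the statement: $F_{\chi}' = W\circ\chi\circ\sigma$, $g_{\pi}' = \pi\circ\gamma$, $h_{\lambda}' = \lambda\circ\phi$ for the embedding indices $\chi,\pi,\lambda\in\hat H$, and $F_t' = W\circ f_t'$ with $f_t'\in {\sf Embed}_{\sigma}(\mu)^{\perp}$ for $t\in T$ (Claim~\ref{claim:svd_effective}(2)). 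Throughout one uses that in $\mu$ the value of $y,z$ implies $x$ (condition 7 of Theorem~\ref{thm:nonembed_deg_must_be_small_rephrase_maximal_relaxed}, inherited via Theorem~\ref{thm:nonembed_homogenous}), so that for $(x,y,z)\sim\mu$ the value $(W\circ u)(y,z)$ is exactly $u(x)$; this lets one pass freely between a function on $\Gamma^J\times\Phi^J$ constant on connected components and the corresponding univariate function of $x$.

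For the first bullet, unwinding the normalizations gives $\widehat{F_{\chi}'}(\pi,\lambda) = \Expect{(x,y,z)\sim \mu}{\chi(\sigma(x))\,\pi(\gamma(y))\,\lambda(\phi(z))}$. By condition 4 of Theorem~\ref{thm:nonembed_deg_must_be_small_rephrase_maximal_relaxed} the triple $(\sigma(x),\gamma(y),\phi(z))$ is uniform over $\{(a,b,c)\in H^3 : a+b+c=0\}$; parametrizing $a,b$ uniformly in $H$ and $c=-a-b$, the expectation equals $\Expect{a,b\in H}{\chi(a)\pi(b)\overline{\lambda(a)}\,\overline{\lambda(b)}}$, which factors as $\Expect{a}{(\chi\overline\lambda)(a)}\cdot\Expect{b}{(\pi\overline\lambda)(b)}$. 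Orthogonality of characters of $H$ makes each factor $1$ or $0$ according as the relevant pair of characters agrees, so the product is $1$ exactly when $\chi=\pi=\lambda$ and $0$ otherwise.

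For the second bullet, the same unwinding gives $\widehat{F_{\chi}'}(r,s) = \Expect{(x,y,z)\sim \mu}{(\chi\circ\sigma)(x)\,g_r'(y)\,h_s'(z)}$, now with $\chi\circ\sigma\in {\sf Embed}_{\sigma}(\mu)$; if $r\in R$ then $g_r'\in {\sf Embed}_{\gamma}(\mu)^{\perp}$ and if $s\in S$ then $h_s'\in {\sf Embed}_{\phi}(\mu)^{\perp}$ (by Claim~\ref{claim:svd} applied to $g$ and its analogue for $h$). In either case this is precisely the situation of the first two bullets of Claim~\ref{claim:ortho_of_embed_nonembed}, which forces the expectation to vanish. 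For the third bullet, write $F_t' = W\circ f_t'$ and unwind to get $\widehat{F_t'}(\pi,\lambda) = \Expect{(x,y,z)\sim \mu}{f_t'(x)\,\pi(\gamma(y))\,\lambda(\phi(z))}$; when $\pi=\lambda$ one combines the last two factors into $\pi(\gamma(y)+\phi(z))$, and on ${\sf supp}(\mu)$ the embedding relation $\sigma(x)+\gamma(y)+\phi(z)=0$ rewrites this as $\pi(-\sigma(x)) = \overline{(\pi\circ\sigma)(x)}$, so $\widehat{F_t'}(\pi,\lambda) = \inner{f_t'}{\pi\circ\sigma}_{\mu_x}$, which is $0$ since $\pi\circ\sigma$ is an embedding function and $f_t'$ is orthogonal to all embedding functions. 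None of the three steps is a genuine obstacle; the only point requiring care is the bookkeeping around $W$ and the $y,z\Rightarrow x$ implication, i.e.\ correctly identifying each $(y,z)$-side factor produced by the singular-value decomposition with a character composed with the appropriate master embedding (or, for $t\in T$, with a function orthogonal to embeddings) before the orthogonality arguments are applied.
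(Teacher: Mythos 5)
Your proposal is correct and follows essentially the same route as the paper: bullet 1 via the uniformity of $(\sigma(x),\gamma(y),\phi(z))$ on $\{a+b+c=0\}$ and character orthogonality (which you merely spell out in more detail), bullet 2 via Claim~\ref{claim:ortho_of_embed_nonembed}, and bullet 3 by combining $\pi(\gamma(y))\pi(\phi(z))=\pi(-\sigma(x))$ and using that $f_t'$ is orthogonal to embedding functions. No gaps.
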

With these observations in hand, we expand out the sum in~\eqref{eq:project_eq} and drop off terms which are $0$ to get that
	\begin{align}\label{eq:large_large_0}
		\beta_{n,d,d'}
		&=
        \sum\limits_{\chi\in\hat{H}}\psi_{\chi}\kappa_{\chi}\rho_{\chi} \widehat{F}_{\chi}(\chi,\chi)
        +
        \sum\limits_{t\in T, \pi,\lambda\in\hat{H}, \pi\neq \lambda}
        \psi_t\kappa_{\pi}\rho_{\lambda} \widehat{F}_{t}(\pi,\lambda)\widehat{F}_{t}'(\pi,\lambda)\notag\\
        &+
        \sum\limits_{t\in T, r\in R,\lambda\in\hat{H}}
        \psi_t\kappa_r\rho_{\lambda} \widehat{F}_{t}(r,\lambda)\widehat{F}_{t}'(r,\lambda)
        +\sum\limits_{t\in T, \pi\in\hat{H}, s\in S}
        \psi_t\kappa_{\pi}\rho_s \widehat{F}_{t}(\pi,s)\widehat{F}_{t}'(\pi,s)\notag\\
        &+
        \sum\limits_{t\in T, r\in R, s\in S}
        \psi_t\kappa_r\rho_s \widehat{F}_{t}(r,s)\widehat{F}_{t}'(r,s).
	\end{align}

    \subsubsection{The Embedding Masses of $F$, $g$ and $h$ Must Align}
    We start off with handling the case where the mass of $F$, $g$ and $h$ is mis-aligned on the embedding part of the decomposition.
    By the choice of $I,J$ we know that most of the mass of $F$ lies on components corresponding to
    embedding functions, and as we noted above among these only the diagonal terms $\chi = \pi = \lambda$ survive. Hence, it makes sense that unless there
    is a single $\chi$ on which most of the embedding mass of $F$, $g$ and $h$ lies, we will already be able to use~\eqref{eq:large_large_0} to argue that
    $\beta_{n,d,d'}$ must be considerably smaller than
    $\beta_{n-1,d-1,d'-1}$. This is the content of the following lemma.
    \begin{lemma}\label{lem:diagonal_small_beta}
        If $\sum\limits_{\chi\in\hat{H}}\card{\psi_{\chi}\kappa_{\chi}\rho_{\chi}}\leq 1-\tau$, then
        $\beta_{n,d,d'}\leq (1-\eps)\max(\beta_{n-1,d-1,d'-1},\beta_{n-1,d,d'})$ (and hence we are done).
    \end{lemma}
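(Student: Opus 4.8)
\textbf{Proof plan for Lemma~\ref{lem:diagonal_small_beta}.}
The plan is to use the decomposition~\eqref{eq:large_large_0} and bound each of the five groups of terms separately, showing that their total is at most $(1-\eps)\max(\beta_{n-1,d-1,d'-1},\beta_{n-1,d,d'})$. The key point is that the ``diagonal'' embedding term $\sum_{\chi}\psi_{\chi}\kappa_{\chi}\rho_{\chi}\widehat{F}_{\chi}(\chi,\chi)$ is the only one that can be close to $1$, and its prefactor $\sum_{\chi}|\psi_{\chi}\kappa_{\chi}\rho_{\chi}|$ is assumed to be at most $1-\tau$. First I would observe that each $\widehat{F}_\chi(\chi,\chi)$ has absolute value at most $\beta_{n-1,d,d'}$: indeed $F_\chi$ is constant on connected components, completely embedding homogeneous, non-embedding homogeneous of degree $d$, of effective degree at least $d'$, with $\norm{F_\chi}_2\le 1$, so it lies (up to normalization) in the class $\mathcal{F}_{n-1,d,d'}$, and $g_\chi, h_\chi$ lie in $\mathcal{G}_{n-1},\mathcal{H}_{n-1}$ after normalization; the defining inequality of $\beta_{n-1,d,d'}$ applies with $\norm{g_\chi}_2\norm{h_\chi}_2\le 1$. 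Hence the diagonal term contributes at most $(1-\tau)\beta_{n-1,d,d'}$ in absolute value by the triangle inequality.

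Next I would handle the four remaining groups, all of which involve some $t\in T$, hence a factor $\psi_t$ with $\sum_{t\in T}|\psi_t|^2\le \eta$ by~\eqref{eq:small_nonembed_inf}. For the groups with $r\in R$ or $s\in S$ present, the function $F_t$ has non-embedding degree $d-1$ and effective degree at least $d'-1$, so $|\widehat{F}_t(r,s)|$ (or $|\widehat{F}_t(r,\lambda)|$, etc.) is at most $\beta_{n-1,d-1,d'-1}$ times the product of the relevant $2$-norms, again by interpreting $F_t$ as a member of $\mathcal{F}_{n-1,d-1,d'-1}$ after normalization. For each fixed $t$, the Cauchy--Schwarz inequality applied to the sum over $(r,s)$ of $\kappa_r\rho_s\widehat{F}_t(r,s)\widehat{F}_t'(r,s)$, together with Parseval (the sets $\{g_r h_s\}$ and $\{g_r' h_s'\}$ are orthonormal, and $\sum_r\kappa_r^2\le 1$, $\sum_s\rho_s^2\le 1$, $\norm{F_t}_2=\norm{F_t'}_2=1$), shows that the inner sum over $r,s$ is bounded by a constant (depending on $m$) times $\beta_{n-1,d-1,d'-1}$. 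Summing over $t\in T$ and using $\sum_{t\in T}|\psi_t|\le \sqrt{|T|}\,(\sum_{t\in T}|\psi_t|^2)^{1/2}\le \sqrt{m\eta}$, the total contribution of all the $t\in T$ terms is at most $O_m(\sqrt{\eta})\cdot\beta_{n-1,d-1,d'-1}$. For the group with $\pi\ne\lambda$ but no $r,s$, one uses the additive-base-case idea: after grouping, $\widehat{F}_t'(\pi,\lambda)$ with $\pi\ne\lambda$ corresponds to a pairing that, via Claim~\ref{claim:additive_base_case} (applied with $F_t'$ orthogonal to embeddings), gains a factor $(1-c)$ over the trivial Cauchy--Schwarz bound; I would organize this so that the contribution is at most $(1-c)$ times something bounded by $\beta_{n-1,d,d'}$ or $\beta_{n-1,d-1,d'-1}$.

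Putting the pieces together, we get
\[
\beta_{n,d,d'}\le (1-\tau)\beta_{n-1,d,d'} + C_m\sqrt{\eta}\,\beta_{n-1,d-1,d'-1} + (1-c)\,C_m'\,\beta_{n-1,d,d'},
\]
and since $\beta$ is monotone (a function of fewer variables, or smaller degree, dominates — this requires a short separate argument, or one simply bounds both by $\max(\beta_{n-1,d-1,d'-1},\beta_{n-1,d,d'})$), the right-hand side is at most $(1-\tau + C_m\sqrt{\eta} + (1-c)C_m')\max(\beta_{n-1,d-1,d'-1},\beta_{n-1,d,d'})$; with the hierarchy $\eps\ll\eta\ll\tau\ll c\ll m^{-1}$ from~\eqref{eq:hier} the coefficient is at most $1-\eps$. \textbf{The main obstacle} I anticipate is the bookkeeping in the mixed terms where one index is an embedding character $\pi\in\hat H$ and the other is a genuine non-embedding index $s\in S$ (or the symmetric case): here I need to carefully split off the part of $\widehat{F}_t'(\pi,s)$ and relate it to the additive base case, making sure the orthonormality bookkeeping across the three functions $F_t',g',h'$ is exactly right and that I really do extract a $(1-c)$ gain rather than just the trivial bound — the homogeneity of $F$ (completely embedding homogeneous) is what makes this work, since it forces the embedding ``type'' of the surviving terms to match up, but the argument has to track the character of $H$ through each factor, just as in the sketch in Section~\ref{sec:reduce_to_near_linear}.
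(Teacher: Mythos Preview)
Your approach is substantially over-complicated, and the over-complication introduces a genuine error in the final step. The paper's proof of this lemma is almost trivial and needs neither Cauchy--Schwarz nor the additive base case. Every term in~\eqref{eq:large_large_0} other than the diagonal sum $\sum_{\chi}\psi_\chi\kappa_\chi\rho_\chi\widehat{F}_\chi(\chi,\chi)$ carries a factor $\psi_t$ with $t\in T$, and $|\psi_t|\le\sqrt{\sum_{t\in T}|\psi_t|^2}\le\sqrt{\eta}$ by~\eqref{eq:small_nonembed_inf}. Using only the crude bounds $|\widehat{F}_t(r,s)|\le\beta_{n-1,d-1,d'-1}$ (valid for all $t\in T$ and all $r\in R\cup\hat{H}$, $s\in S\cup\hat{H}$, since $F_t$ has non-embedding degree $d-1$ and effective degree at least $d'-1$), $|\widehat{F}_t'(r,s)|\le 1$, $|\kappa_r|,|\rho_s|\le 1$, together with the fact that $|T|,|R|,|S|,|\hat{H}|\le m$, all four non-diagonal groups contribute at most $O_m(\sqrt{\eta})\,\beta_{n-1,d-1,d'-1}$ in total. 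Hence
\[
\beta_{n,d,d'}\le(1-\tau)\beta_{n-1,d,d'}+O_m(\sqrt{\eta})\,\beta_{n-1,d-1,d'-1}\le\bigl(1-\tau+O_m(\sqrt{\eta})\bigr)\max(\beta_{n-1,d,d'},\beta_{n-1,d-1,d'-1}),
\]
and the hierarchy~\eqref{eq:hier} finishes.

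Your final displayed inequality contains a standalone term $(1-c)\,C_m'\,\beta_{n-1,d,d'}$ with no $\sqrt{\eta}$ factor. This is wrong: the group ``$\pi\neq\lambda$ but no $r,s$'' is still a sum over $t\in T$, so it too is suppressed by $\sqrt{\eta}$ and should be absorbed into the $C_m\sqrt{\eta}$ term. With the spurious third term as you wrote it, the coefficient $(1-\tau)+C_m\sqrt{\eta}+(1-c)C_m'$ cannot be made $\le 1-\eps$, since $(1-c)C_m'$ is a positive $O_m(1)$ constant, much larger than $\tau$. The $(1-c)$ gain from the additive base case is not needed here at all; it is the engine of the \emph{complementary} case (Lemma~\ref{lem:bound_additivebase_beta}), after one has used the present lemma to reduce to $\sum_\chi|\psi_\chi\kappa_\chi\rho_\chi|>1-\tau$.
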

    \begin{proof}
        Using~\eqref{eq:large_large_0} we get
        \[
        \beta_{n,d,d'}
        \leq
        (1-\tau) \beta_{n-1,d,d'}
        +O_{m}\left(\sqrt{\eta} \beta_{n-1,d-1,d'-1}\right),
        \]
        where we used the fact that for $t\in T$ we have that $\card{\psi_t}\leq \sqrt{\sum\limits_{t\in T}\card{\psi_t}^2}\leq \sqrt{\eta}$,
        as well as
        \[
        \card{\widehat{F}_{\chi}(\chi,\chi)}\leq \beta_{n-1,d,d'},
        \qquad \card{\widehat{F}_{t}(r,s)}\leq \beta_{n-1,d-1,d'-1}
        \]
        for all $t\in T$, $r\in R\cup \hat{H}$, $s\in S\cup \hat{H}$.
        We get that
        \[
            \frac{\beta_{n,d,d'}}{\max(\beta_{n-1,d-1,d'-1},\beta_{n-1,d,d'})}
            \leq 1-\tau + O_m(\sqrt{\eta})
            \leq 1-\tau/2
            \leq 1-\eps,
        \]
        and we are done.
    \end{proof}

    By Lemma~\ref{lem:diagonal_small_beta}, we assume henceforth that
    $\sum\limits_{\chi\in\hat{H}}\card{\psi_{\chi}\kappa_{\chi}\rho_{\chi}}> 1-\tau$, and we note that
    \begin{align*}
    \sum\limits_{\chi\in\hat{H}}\card{\psi_{\chi}\kappa_{\chi}\rho_{\chi}}
    &\leq
    \max_{\chi}\card{\psi_{\chi}\kappa_{\chi}\rho_{\chi}}^{1/3}\sum\limits_{\chi\in\hat{H}}\card{\psi_{\chi}\kappa_{\chi}\rho_{\chi}}^{2/3}\\
    &\leq
    \max_{\chi}\card{\psi_{\chi}\kappa_{\chi}\rho_{\chi}}^{1/3}
    \left(\sum\limits_{\chi\in\hat{H}}\card{\psi_{\chi}}^2\right)^{1/3}
    \left(\sum\limits_{\chi\in\hat{H}}\card{\kappa_{\chi}}^2\right)^{1/3}
    \left(\sum\limits_{\chi\in\hat{H}}\card{\rho_{\chi}}^2\right)^{1/3}\\
    &\leq \max_{\chi}\card{\psi_{\chi}\kappa_{\chi}\rho_{\chi}}^{1/3},
    \end{align*}
    where we used H\"{o}lder's inequality. It follows that there is $\chi$ such that
    $\card{\psi_{\chi}\kappa_{\chi}\rho_{\chi}}^{1/3}\geq 1-\tau$, and we call it $\chi^{\star}$.
    In particular it follows that
    \begin{equation}\label{eq:beta_arg_most_mass_embed}
      \card{\psi_{\chi^{\star}}},\card{\kappa_{\chi^{\star}}}, \card{\rho_{\chi^{\star}}}\geq 1-3\tau.
    \end{equation}

    \subsubsection{Upper Bounding Terms in~\eqref{eq:large_large_0}}
    We now proceed to upper bounding terms in~\eqref{eq:large_large_0}. The first sum is handled by the following lemma:
    \begin{lemma}\label{lem:bound_main_term_beta}
      It holds that:
        \[
        \card{\sum\limits_{\chi\in\hat{H}}\psi_{\chi}\kappa_{\chi}\rho_{\chi} \widehat{F}_{\chi}(\chi,\chi)}\leq
        \beta_{n-1,d,d'}\sqrt{\sum\limits_{\chi\in\hat{H}}\card{\psi_{\chi}}^2}\sqrt{\sum\limits_{\chi\in\hat{H}}\card{\kappa_{\chi}}^2\card{\rho_{\chi}}^2}.
        \]
    \end{lemma}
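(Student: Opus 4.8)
\textbf{Proof plan for Lemma~\ref{lem:bound_main_term_beta}.}

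The plan is to recognize $\widehat{F}_{\chi}(\chi,\chi)$ as a $3$-wise correlation of the form controlled by $\beta_{n-1,d,d'}$, apply a uniform bound to each term, and then close with Cauchy--Schwarz. First I would recall that $\widehat{F}_{\chi}(\chi,\chi) = \Expect{(x,y,z)\sim\mu^{\otimes(n-1)}}{F_{\chi}(y_I,z_I)g_{\chi}(y_I)h_{\chi}(z_I)}$, where by Claim~\ref{claim:svd_effective}(1) the function $F_{\chi}$ is constant on connected components, completely embedding homogenous, non-embedding homogenous of degree $d$, and of effective non-embedding degree at least $d'$, so $F_{\chi}\in\mathcal{F}_{n-1,d,d'}$; and by Claim~\ref{claim:svd}(1) (and its analog for $h$) the functions $g_{\chi},h_{\chi}$ are completely embedding homogenous and non-embedding homogenous, hence $g_{\chi}\in\mathcal{G}_{n-1}$ and $h_{\chi}\in\mathcal{H}_{n-1}$. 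Therefore, by the very definition of $\beta_{n-1,d,d'}$, we have
\[
\card{\widehat{F}_{\chi}(\chi,\chi)}\leq \beta_{n-1,d,d'}\,\norm{F_{\chi}}_2\norm{g_{\chi}}_2\norm{h_{\chi}}_2 = \beta_{n-1,d,d'}
\]
since $\{F_{\chi}\}$, $\{g_{\chi}\}$ and $\{h_{\chi}\}$ are orthonormal families (by Claims~\ref{claim:svd},~\ref{claim:svd_effective}), so each of these $2$-norms equals $1$.

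Next I would bound the full sum by triangle inequality and then apply Cauchy--Schwarz in a way that separates the $F$-coefficients from the $g,h$-coefficients:
\[
\card{\sum_{\chi\in\hat{H}}\psi_{\chi}\kappa_{\chi}\rho_{\chi}\widehat{F}_{\chi}(\chi,\chi)}
\leq \beta_{n-1,d,d'}\sum_{\chi\in\hat{H}}\card{\psi_{\chi}}\card{\kappa_{\chi}\rho_{\chi}}
\leq \beta_{n-1,d,d'}\sqrt{\sum_{\chi\in\hat{H}}\card{\psi_{\chi}}^2}\sqrt{\sum_{\chi\in\hat{H}}\card{\kappa_{\chi}}^2\card{\rho_{\chi}}^2},
\]
which is exactly the claimed inequality. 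The grouping $\card{\psi_{\chi}}\cdot\card{\kappa_{\chi}\rho_{\chi}}$ is the natural one here because later in the argument one wants to pull out $\sqrt{\sum_{\chi}\card{\psi_{\chi}}^2}\leq 1$ (Parseval for $F$'s embedding part) and separately estimate $\sum_{\chi}\card{\kappa_{\chi}}^2\card{\rho_{\chi}}^2$ via the near-diagonal concentration~\eqref{eq:beta_arg_most_mass_embed}.

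I do not expect a genuine obstacle here; this is a short combination of the SVD structure with the definition of $\beta$ and one Cauchy--Schwarz. The only point requiring a little care is making sure the homogeneity bookkeeping matches: one must confirm that $F_{\chi}$ really lands in $\mathcal{F}_{n-1,d,d'}$ (same non-embedding degree $d$ and effective degree still $\geq d'$, since peeling off an embedding coordinate $F_{\chi}'\in{\sf span}(\{W\circ\chi\circ\sigma\})$ changes neither), and that $g_{\chi},h_{\chi}$ remain completely embedding homogenous and non-embedding homogenous after removing the embedding factor on the $J$-coordinate. All of these are guaranteed by Claims~\ref{claim:svd} and~\ref{claim:svd_effective}, so the proof is essentially immediate once those are invoked.
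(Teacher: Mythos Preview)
Your proposal is correct and follows exactly the same approach as the paper's proof: bound each $\card{\widehat{F}_{\chi}(\chi,\chi)}$ by $\beta_{n-1,d,d'}$ via the definition, then apply the triangle inequality and Cauchy--Schwarz. You have simply spelled out in more detail why $F_{\chi},g_{\chi},h_{\chi}$ land in the right function classes, which the paper leaves implicit.
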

    \begin{proof}
        By definition$\card{\widehat{F}_{\chi}(\chi,\chi)}\leq \beta_{n-1,d,d'}$ for all $\chi\in\hat{H}$,
        and the result follows from the triangle inequality and Cauchy-Schwarz.
    \end{proof}

    The next lemma handles summands from the second, third and fourth sums in which $\chi^{\star}$ does not appear.
    \begin{lemma}\label{lem:bound_small_beta}
      It holds that:

        \[
        \card{\sum\limits_{\substack{t\in T, r\in R\cup\hat{H}, s\in S\cup\hat{H}\\ r,s\neq \chi^{\star}}} \psi_t\kappa_r\rho_s \widehat{F}_{t}(r,s)\widehat{F}_{t}'(r,s)}\lll_{m}
        \beta_{n-1,d-1,d'-1}\sqrt{\sum\limits_{t\in T} \card{\psi_t}^2}\sqrt{1-\card{\kappa_{\chi^{\star}}}^2}\sqrt{1-\card{\rho_{\chi^{\star}}}^2}.
        \]
    \end{lemma}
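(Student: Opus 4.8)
\textbf{Proof plan for Lemma~\ref{lem:bound_small_beta}.} The plan is to bound the sum by a single application of Cauchy--Schwarz, splitting the summand into the ``outer'' factor $\psi_t\kappa_r\rho_s\widehat{F_t}(r,s)$ and the ``inner'' factor $\widehat{F_t'}(r,s)$, then controlling each of the two resulting sums separately using the orthonormality properties of the singular value decomposition and the definition of $\beta_{n-1,d-1,d'-1}$. First I would estimate the trivial bound $|\widehat{F_t}(r,s)|\le \beta_{n-1,d-1,d'-1}$ for all $t\in T$, $r\in R\cup\hat H$, $s\in S\cup\hat H$, which holds because $F_t$ is constant on connected components, completely embedding homogenous, non-embedding homogenous of degree $d-1$, and of effective non-embedding degree at least $d'-1$ (by Claim~\ref{claim:svd_effective}), so $\widehat{F_t}(r,s)$ is exactly an expectation of the form defining $\beta_{n-1,d-1,d'-1}$ up to the normalizing $2$-norms, which are all $1$ by orthonormality. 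Thus the full sum is bounded by
\[
\beta_{n-1,d-1,d'-1}\sum_{\substack{t\in T,\ r\in R\cup\hat H,\ s\in S\cup\hat H\\ r,s\neq \chi^{\star}}} |\psi_t|\,|\kappa_r|\,|\rho_s|\,|\widehat{F_t'}(r,s)|.
\]

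\textbf{Splitting the remaining sum.} Next I would apply Cauchy--Schwarz to this triple sum, grouping $|\psi_t|$ with $|\widehat{F_t'}(r,s)|$ on one side and $|\kappa_r||\rho_s|$ on the other (or some convenient regrouping). The key point is to use the fact that $\{F_t'\}_{t\in T}\cup\{F_\chi'\}_{\chi\in\hat H}$ is orthonormal, and that each $\widehat{F_t'}(r,s)$ is a coefficient of $F_t'$ against the orthonormal system built from $\{g_r'\}\cup\{g_\pi'\}$ and $\{h_s'\}\cup\{h_\lambda'\}$; consequently $\sum_{r,s}|\widehat{F_t'}(r,s)|^2\le \|F_t'\|_2^2 = 1$ for each fixed $t$. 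Combining with $\sum_{t\in T}|\psi_t|^2 \le \eta$ (really just $\sum_{t\in T}|\psi_t|^2\le 1$, but we keep the $\psi_t$ factor), one side of the Cauchy--Schwarz contributes $\big(\sum_{t\in T}|\psi_t|^2\big)^{1/2}$ times something bounded by the number of terms or by $1$; the other side, carrying $|\kappa_r||\rho_s|$ with the restriction $r,s\neq\chi^{\star}$, contributes $\big(\sum_{r\neq\chi^{\star}}|\kappa_r|^2\big)^{1/2}\big(\sum_{s\neq\chi^{\star}}|\rho_s|^2\big)^{1/2}$. Since $\sum_{r\in R\cup\hat H}|\kappa_r|^2 = 1$ by Claim~\ref{claim:svd}, and the $\chi^\star$ term is excluded, we get $\sum_{r\neq\chi^{\star}}|\kappa_r|^2 = 1 - |\kappa_{\chi^{\star}}|^2$, and likewise $\sum_{s\neq\chi^{\star}}|\rho_s|^2 = 1-|\rho_{\chi^{\star}}|^2$. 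The constant factor $\lll_m$ absorbs the (bounded, depending only on $m=|\hat H|$) combinatorial overhead from the Cauchy--Schwarz regrouping, since the number of characters is at most $m$.

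\textbf{Expected main obstacle.} The routine part is the Cauchy--Schwarz bookkeeping; the slightly delicate point is making sure the regrouping is chosen so that all three of the factors $\sqrt{\sum_{t\in T}|\psi_t|^2}$, $\sqrt{1-|\kappa_{\chi^\star}|^2}$, $\sqrt{1-|\rho_{\chi^\star}|^2}$ appear and none is wasted --- in particular that the exclusion $r,s\neq\chi^\star$ is exploited on \emph{both} the $\kappa$ and the $\rho$ sides simultaneously rather than only one of them, since both small factors $\sqrt{1-|\kappa_{\chi^\star}|^2}$ and $\sqrt{1-|\rho_{\chi^\star}|^2}$ are needed downstream when this is combined (via~\eqref{eq:beta_arg_most_mass_embed}) with the bound on the main term to yield $\beta_{n,d,d'}\le(1-\eps)\max(\beta_{n-1,d-1,d'-1},\beta_{n-1,d,d'})$ or $\beta_{n,d,d'}\le\beta_{n-1,d,d'}$. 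Concretely, I expect to write the triple sum as $\sum_t |\psi_t|\big(\sum_{r\neq\chi^\star}|\kappa_r|\,|\cdot|\big)\big(\cdots\big)$, apply Cauchy--Schwarz first in $(r,s)$ for fixed $t$ using $\sum_{r,s}|\widehat{F_t'}(r,s)|^2\le 1$, obtaining $\sum_t|\psi_t|\big(\sum_{r\neq\chi^\star,s\neq\chi^\star}|\kappa_r|^2|\rho_s|^2\big)^{1/2}$, factor the inner sum as $(1-|\kappa_{\chi^\star}|^2)(1-|\rho_{\chi^\star}|^2)$, and then apply Cauchy--Schwarz once more in $t$ against $\sum_{t\in T}|\psi_t|^2$, noting $|T|\le$ (bounded). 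This gives exactly the claimed bound.
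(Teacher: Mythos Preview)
Your proposal is correct, and your concrete plan at the end (Cauchy--Schwarz in $(r,s)$ for fixed $t$ using $\sum_{r,s}|\widehat{F_t'}(r,s)|^2\le 1$, then Cauchy--Schwarz in $t$ against $|T|\le m$) produces exactly the claimed bound. The paper's proof is even cruder: it simply bounds each individual summand by
\[
|\psi_t|\,|\kappa_r|\,|\rho_s|\,|\widehat{F}_t(r,s)|\,|\widehat{F}_t'(r,s)|
\le
\sqrt{\textstyle\sum_{t\in T}|\psi_t|^2}\cdot\sqrt{1-|\kappa_{\chi^\star}|^2}\cdot\sqrt{1-|\rho_{\chi^\star}|^2}\cdot\beta_{n-1,d-1,d'-1}\cdot 1,
\]
using $|\psi_t|\le(\sum_{t'}|\psi_{t'}|^2)^{1/2}$, $|\kappa_r|\le(\sum_{r'\neq\chi^\star}|\kappa_{r'}|^2)^{1/2}$, $|\rho_s|\le(\sum_{s'\neq\chi^\star}|\rho_{s'}|^2)^{1/2}$ termwise, together with $|\widehat{F}_t'(r,s)|\le 1$, and then multiplies by the number of summands, which is at most $m^3$. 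So the paper does not use the Bessel bound $\sum_{r,s}|\widehat{F_t'}(r,s)|^2\le 1$ at all here; it absorbs everything into the $\lll_m$ constant via a raw count. Your approach gives a slightly better constant ($\sqrt{|T|}$ instead of $m^3$), but since only an $O_m(1)$ factor is needed, the paper opts for the one-line argument.

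Your worry about simultaneously extracting both factors $\sqrt{1-|\kappa_{\chi^\star}|^2}$ and $\sqrt{1-|\rho_{\chi^\star}|^2}$ is a non-issue: the summation condition is $r\neq\chi^\star$ \emph{and} $s\neq\chi^\star$, so both restrictions are active from the start and factor cleanly.
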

    \begin{proof}
        For all $t\in T$ it holds that $\card{\psi_t}\leq \sqrt{\sum\limits_{t\in T} \card{\psi_t}^2}$, for $r\neq \chi^{\star}$ we have
        $\card{\lambda_r}\leq \sqrt{\sum\limits_{r'\neq \chi^{\star}}\card{\kappa_{r'}}^2} = \sqrt{1-\card{\kappa_{\chi^{\star}}}^2}$,
        and similarly for $s\neq \chi^{\star}$ we have $\card{\rho_s}\leq \sqrt{1-\card{\rho_{\chi^{\star}}}^2}$. Also, for all such $t,r,s$
        we have that $\card{\widehat{F}_{t}(r,s)}\leq \beta_{n-1,d-1,d'-1}$ and $\card{\widehat{F}_{t}'(r,s)}\leq 1$, and as the number of summands
        is at most $m^3$ the bound follows.
    \end{proof}

    The following lemma handles the rest of the summands from the second, third and fourth sums, namely those in which $\chi^{\star}$ does appear
    (we recall that if both $r$ and $s$ are equal to $\chi^{\star}$, then such summands are $0$, hence we are dealing with summands in which exactly
    one of $r$ and $s$ is equal to $\chi^{\star}$).
    \begin{lemma}\label{lem:bound_additivebase_beta}
      It holds that:
        \begin{align*}
        &\card{\sum\limits_{\substack{t\in T\\ r\in R\cup\hat{H}, s\in S\cup\hat{H}\\ r = \chi^{\star}\text{ or } s = \chi^{\star}}}
        \psi_t\kappa_r\rho_s \widehat{F}_{t}(r,s)\widehat{F}_{t}'(r,s)}\\
        &\qquad\qquad\qquad\leq
        (1-c)\beta_{n-1,d-1,d'-1}\sqrt{\sum\limits_{t\in T} \card{\psi_t}^2}
        \sqrt{\card{\kappa_{\chi^{\star}}}^2\sum\limits_{s\neq \chi^{\star}}\card{\rho_s}^2
      +\card{\rho_{\chi^{\star}}}^2\sum\limits_{r\neq \chi^{\star}}\card{\kappa_r}^2}.
        \end{align*}
    \end{lemma}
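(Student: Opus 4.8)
\textbf{Proof plan for Lemma~\ref{lem:bound_additivebase_beta}.}

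The plan is to reorganize the sum so that it becomes an instantiation of the additive base case (Claim~\ref{claim:additive_base_case}), applied with respect to the single coordinate $j$ in $J$, and then sum up via Cauchy--Schwarz. First I would split the sum into the part with $r = \chi^{\star}$ (so $s$ ranges over $S\cup\hat{H}$ with $s\neq\chi^{\star}$) and the part with $s = \chi^{\star}$ (so $r$ ranges over $R\cup\hat{H}$ with $r\neq\chi^{\star}$); by the observations preceding~\eqref{eq:large_large_0} the diagonal term $r=s=\chi^{\star}$ vanishes, so there is no double-counting issue. Fix $t\in T$ for the moment. For the $r=\chi^{\star}$ part, recall that $g_{\chi^{\star}}' = \chi^{\star}\circ\gamma$, so $\widehat{F_t'}(\chi^{\star},s) = \Expect{(x,y,z)\sim\mu}{f_t'(x)\,\chi^{\star}(\gamma(y))\,h_s'(z)}$, and since $\chi^{\star}(\gamma(y)) = \chi^{\star}(-\sigma(x)-\phi(z)) = \chi^{\star}(-\sigma(x))\overline{\chi^{\star}(\phi(z))}$ (using that $(\sigma(x),\gamma(y),\phi(z))$ lies on the zero-sum subgroup with full, uniform support), this rewrites as $\Expect{}{\,(f_t'(x)\chi^{\star}(-\sigma(x)))\cdot(\overline{\chi^{\star}(\phi(z))}h_s'(z))\,}$. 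The key point is that $\tilde{f_t'}(x) := f_t'(x)\chi^{\star}(\sigma(x))^{-1}$ is still orthogonal to all embedding functions (multiplying by an embedding function preserves ${\sf Embed}_\sigma(\mu)^{\perp}$, since ${\sf Embed}_\sigma(\mu)$ is closed under multiplication by characters), and its $2$-norm equals $\norm{f_t'}_2 = 1$.

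Next I would collapse the two inner sums over $s\neq\chi^{\star}$ and $r\neq\chi^{\star}$ into single functions of $y_J$ and $z_J$. Writing $G_t(y_J) = \kappa_{\chi^{\star}}\sum_{s\neq\chi^{\star}}\rho_s\,\widehat{F_t}(\chi^{\star},s)\,h_s'(\cdot)$ evaluated appropriately --- wait, more precisely: the combined contribution of both parts, for fixed $t$, has the form $\Expect{(x,y,z)\sim\mu}{\tilde{f_t'}(x)\big(G_t(y) + H_t(z)\big)}$ where $G_t$ is a linear combination (with coefficients $\rho_{\chi^{\star}}\kappa_r\widehat{F_t}(r,\chi^{\star})$) of the functions $g_r'$ times the fixed embedding factor $\overline{\chi^{\star}(\gamma(\cdot))}$, and $H_t$ is the analogous combination of $h_s'$ times $\overline{\chi^{\star}(\phi(\cdot))}$. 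Then Claim~\ref{claim:additive_base_case} applied to $(\tilde{f_t'}, \overline{\chi^{\star}}G_t'', \overline{\chi^{\star}}H_t'')$ --- using that $\tilde{f_t'}\in{\sf Embed}_\sigma(\mu)^{\perp}$ --- gives a gain of $(1-c)$ over $\norm{\tilde{f_t'}}_2\norm{G_t + H_t}_2$, i.e. over $\norm{G_t + H_t}_2$. One then bounds $\norm{G_t+H_t}_2^2$: expanding, the cross term vanishes by Claim~\ref{claim:ortho_of_embed_nonembed} (an embedding function in $y$ against an embedding function in $z$, both with average zero, noting the $g_r'$ for $r\in R$ are non-embedding and thus even more orthogonal), so $\norm{G_t+H_t}_2^2 = \norm{G_t}_2^2 + \norm{H_t}_2^2 = \card{\rho_{\chi^{\star}}}^2\sum_{r\neq\chi^{\star}}\card{\kappa_r}^2\card{\widehat{F_t}(r,\chi^{\star})}^2 + \card{\kappa_{\chi^{\star}}}^2\sum_{s\neq\chi^{\star}}\card{\rho_s}^2\card{\widehat{F_t}(\chi^{\star},s)}^2$, and each $\card{\widehat{F_t}(\cdot,\cdot)}\leq\beta_{n-1,d-1,d'-1}$ while $\sum_s\card{\widehat{F_t}(\chi^{\star},s)}^2\leq\norm{F_t}_2^2 = 1$ after pulling out the $\beta$ factor carefully. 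Finally, summing over $t\in T$ with the weights $\psi_t$ and applying Cauchy--Schwarz in $t$ yields the factor $\sqrt{\sum_{t\in T}\card{\psi_t}^2}$ and the stated square-root expression.

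The main obstacle I anticipate is the bookkeeping needed to genuinely present the two-part sum (the $r=\chi^{\star}$ branch and the $s=\chi^{\star}$ branch) as a \emph{single} additive-base-case expression of the form $\Expect{}{f'(x)(g'(y)+h'(z))}$ with the correct norm on $g'+h'$ --- in particular keeping track of the embedding prefactor $\chi^{\star}$ so that the function on the $x$-side is exactly $\tilde{f_t'}$ (orthogonal to embeddings) and verifying that the $y$-part and $z$-part are genuinely orthogonal so that $\norm{g'+h'}_2^2$ splits additively. A secondary subtlety is that Claim~\ref{claim:additive_base_case} is stated for $\mu$ itself (univariate), so I must first carry out the reduction to coordinate $j$: the expectation $\Expect{\mu^{\otimes n}}{F_t'(y_J,z_J)g_r'(y_J)h_s'(z_J)}$ genuinely only involves the single coordinate $j$, so this is direct, but one must also confirm that the distribution of $(\sigma(x_j),\gamma(y_j),\phi(z_j))$ under the relevant marginal is uniform on the zero-sum set --- which holds by hypothesis~(4) of Theorem~\ref{thm:nonembed_deg_must_be_small_rephrase_maximal_relaxed}. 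Everything else is Cauchy--Schwarz and Parseval and should go through routinely, following the template of~\cite[Section 5]{BKMcsp2}.
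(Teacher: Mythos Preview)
Your approach is essentially the paper's: split into the $r=\chi^{\star}$ and $s=\chi^{\star}$ branches, absorb the embedding factor $\chi^{\star}$ into the $x$-function so that $\tilde f_t'(x)=f_t'(x)\chi^{\star}(-\sigma(x))$ stays in ${\sf Embed}_\sigma(\mu)^{\perp}$, collapse the two branches into functions $G_t(y)$ and $H_t(z)$, invoke Claim~\ref{claim:additive_base_case} to gain $(1-c)$, note $\norm{G_t+H_t}_2^2=\norm{G_t}_2^2+\norm{H_t}_2^2$, and then Cauchy--Schwarz over $t$. All of this is correct and matches the paper.

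There is one genuine gap in your final step. After Cauchy--Schwarz in $t$ you arrive at
\[
(1-c)\sqrt{\sum_{t\in T}\card{\psi_t}^2}\,\sqrt{\card{\kappa_{\chi^{\star}}}^2\sum_{s\neq\chi^{\star}}\card{\rho_s}^2\sum_{t\in T}\card{\widehat{F_t}(\chi^{\star},s)}^2+\card{\rho_{\chi^{\star}}}^2\sum_{r\neq\chi^{\star}}\card{\kappa_r}^2\sum_{t\in T}\card{\widehat{F_t}(r,\chi^{\star})}^2},
\]
and you must show $\sum_{t\in T}\card{\widehat{F_t}(\chi^{\star},s)}^2\leq\beta_{n-1,d-1,d'-1}^2$ for each fixed $s$. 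The two bounds you state --- $\card{\widehat{F_t}(\cdot,\cdot)}\leq\beta_{n-1,d-1,d'-1}$ for each $t$ individually, and $\sum_s\card{\widehat{F_t}(\chi^{\star},s)}^2\leq 1$ for each fixed $t$ --- do not give this: the first costs a factor of $\card{T}$ (which, though bounded by $m$, destroys the exact form needed for the subsequent Cauchy--Schwarz combination with Lemma~\ref{lem:bound_main_term_beta}), and the second sums over the wrong index. The paper's fix is to form, for each fixed $s$, the test function
\[
\tilde F=\frac{\sum_{t\in T}\overline{\widehat{F_t}(\chi^{\star},s)}\,F_t}{\sqrt{\sum_{t\in T}\card{\widehat{F_t}(\chi^{\star},s)}^2}},
\]
which lies in $\mathcal{F}_{n-1,d-1,d'-1}$ with $\norm{\tilde F}_2=1$ (the $F_t$ are orthonormal and share the same embedding-degree profile), and then observe that $\Expect{}{\tilde F\,g_{\chi^{\star}}\,h_s}=\sqrt{\sum_{t\in T}\card{\widehat{F_t}(\chi^{\star},s)}^2}$, whence the required bound by the definition of $\beta_{n-1,d-1,d'-1}$. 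With this insertion your argument is complete and coincides with the paper's.
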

    \begin{proof}
      We write the sum in discussion as
      \begin{equation}\label{eq:apply_additive_base_case}
      \sum\limits_{t\in T, r\neq \chi^{\star}}
        \psi_{t}\kappa_{r}\rho_{\chi^{\star}} \widehat{F}_{t}(r,\chi^{\star})\widehat{F}_{t}'(r,\chi^{\star})
      +
      \sum\limits_{t\in T, s\neq \chi^{\star}}
        \psi_{t}\kappa_{\chi^{\star}}\rho_{s} \widehat{F}_{t}(\chi^{\star},s)\widehat{F}_{t}'(\chi^{\star},s).
      \end{equation}
      Fix $t\in T$, write $F_t' = W f_t'$ and consider the corresponding terms. We note that
      \[
      \widehat{F}_{t}'(r,\chi^{\star})
      =\Expect{(x,y,z)\sim \mu}{f_t'(x) g_r'(y) \chi^{\star}(\phi(z))}
      =\Expect{(x,y,z)\sim \mu}{f_t'(x)\chi^{\star}(-\sigma(x)) g_r'(y) \chi^{\star}(-\gamma(y))}
      \]
      and similarly
      \[
      \widehat{F}_{t}'(\chi^{\star},s)
      =\Expect{(x,y,z)\sim \mu}{f_t'(x) \chi^{\star}(\phi(y))h_s'(z)}
      =\Expect{(x,y,z)\sim \mu}{f_t'(x)\chi^{\star}(-\sigma(x)) h_s'(z) \chi^{\star}(-\phi(z))}.
      \]
      Thus, we define
      \[
      \tilde{g}(y) = \sum\limits_{r\neq \chi^{\star}} \widehat{F}_t(r,\chi^{\star})\kappa_r\rho_{\chi^{\star}} g_r'(y) \chi^{\star}(-\gamma(y)),
      \qquad
      \tilde{h}(z) = \sum\limits_{s\neq \chi^{\star}} \widehat{F}_t(\chi^{\star}, s)\kappa_{\chi^{\star}}\rho_{s} h_s'(z) \chi^{\star}(-\phi(z)),
      \]
      and get that the contribution of $t$ to~\eqref{eq:apply_additive_base_case} is equal to
      \[
      \Expect{(x,y,z)\sim\mu}{f_t'(x)\chi^{\star}(-\sigma(x))(\tilde{g}(y) + \tilde{h}(z))}.
      \]
      We note that $\tilde{g}$, $\tilde{h}$ are orthogonal, and
      \[
      \norm{\tilde{g}}_2^2
      =
      \sum\limits_{r,r'\neq \chi^{\star}} \widehat{F}_t(r,\chi^{\star})\overline{\widehat{F}_t(r',\chi^{\star})}\card{\rho_{\chi^{\star}}}^2\kappa_r\overline{\kappa_{r'}}
      \inner{g_r'}{g_{r'}'}
      =\card{\rho_{\chi^{\star}}}^2\sum\limits_{r\neq \chi^{\star}}\card{\widehat{F}_t(r,\chi^{\star})}^2\card{\kappa_r}^2,
      \]
      and similarly for $\tilde{h}$. Also, $f_t'(x)\chi^{\star}(-\sigma(x))$ is perpendicular
      to all embedding functions. Applying Claim~\ref{claim:additive_base_case} we conclude that
      \begin{align*}
      &\card{\Expect{(x,y,z)\sim\mu}{f_t'(x)\chi^{\star}(-\sigma(x))(\tilde{g}(y) + \tilde{h}(z))}}\\
      &\qquad\qquad\qquad\leq
      (1-c)\sqrt{\card{\kappa_{\chi^{\star}}}^2\sum\limits_{s\neq \chi^{\star}}\card{\rho_s}^2\card{\widehat{F}_t(\chi^{\star},s)}^2
      +\card{\rho_{\chi^{\star}}}^2\sum\limits_{r\neq \chi^{\star}}\card{\kappa_r}^2\card{\widehat{F}_t(r,\chi^{\star})}^2}.
      \end{align*}
      Plugging this into~\eqref{eq:apply_additive_base_case} gives that
      \begin{align}\label{eq:apply_additive_base_case2}
      \card{\eqref{eq:apply_additive_base_case}}
      &\leq
      (1-c)\sum\limits_{t\in T}\card{\psi_t}\sqrt{\card{\kappa_{\chi^{\star}}}^2\sum\limits_{s\neq \chi^{\star}}\card{\rho_s}^2\card{\widehat{F}_t(\chi^{\star},s)}^2
      +\card{\rho_{\chi^{\star}}}^2\sum\limits_{r\neq \chi^{\star}}\card{\kappa_r}^2\card{\widehat{F}_t(r,\chi^{\star})}^2} \notag\\
      &\leq
      (1-c)\sqrt{\sum\limits_{t\in T}\card{\psi_t}^2}
      \sqrt{\card{\kappa_{\chi^{\star}}}^2\sum\limits_{s\neq \chi^{\star}}\card{\rho_s}^2\sum\limits_{t\in T}\card{\widehat{F}_t(\chi^{\star},s)}^2
      +\card{\rho_{\chi^{\star}}}^2\sum\limits_{r\neq \chi^{\star}}\card{\kappa_r}^2\sum\limits_{t\in T}\card{\widehat{F}_t(r,\chi^{\star})}^2},
      \end{align}
      where we used Cauchy-Schwarz. For fixed $s\in S\cup\hat{H}$, note that letting
      $\tilde{F} = \frac{\sum\limits_{t\in T}\overline{\widehat{F}_t(\chi^{\star},s)} F_t}{\sqrt{\sum\limits_{t\in T}\card{\widehat{F}_t(\chi^{\star},s)}^2}}$
      we have that $\tilde{F}$ has $2$-norm equal to $1$, is completely embedding homogenous and non-embedding
      homogenous of degree $d-1$. Thus,
      \[
      \sum\limits_{t\in T}\card{\widehat{F}_t(\chi^{\star},s)}^2
      =
      \card{\Expect{(x,y,z)\sim \mu^{\otimes n-1}}{\tilde{F}(y,z)g_{\chi^{\star}}(y)h_s(z)}}^2
      \leq \beta_{n-1,d-1,d'-1}^2.
      \]
      Similarly, $\sum\limits_{t\in T}\card{\widehat{F}_t(r,\chi^{\star})}^2\leq \beta_{n-1,d-1,d'-1}^2$ for all $r\in R\cup\hat{H}$,
      and plugging this into~\eqref{eq:apply_additive_base_case2} yields that
      \[
      \card{\eqref{eq:apply_additive_base_case}}
      \leq(1-c)\beta_{n-1,d-1,d'-1}\sqrt{\sum\limits_{t\in T}\card{\psi_t}^2}
      \sqrt{\card{\kappa_{\chi^{\star}}}^2\sum\limits_{s\neq \chi^{\star}}\card{\rho_s}^2
      +\card{\rho_{\chi^{\star}}}^2\sum\limits_{r\neq \chi^{\star}}\card{\kappa_r}^2}.
      \]
    \end{proof}

    Plugging Lemmas~\ref{lem:bound_main_term_beta}~\ref{lem:bound_small_beta},~\ref{lem:bound_additivebase_beta} into~\eqref{eq:large_large_0} we get that
    \begin{align*}
    \beta_{n,d,d'}
    &\leq
    \beta_{n-1,d,d'}\sqrt{\sum\limits_{\chi\in \hat{H}}\card{\psi_{\chi}}^2}
    \sqrt{\sum\limits_{\chi\in\hat{H}}\card{\kappa_{\chi}}^2\card{\rho_{\chi}}^2}\\
    &+
    (1-c)\beta_{n-1,d-1,d'-1}\sqrt{\sum\limits_{t\in T} \card{\psi_t}^2}
        \sqrt{\card{\kappa_{\chi^{\star}}}^2\sum\limits_{s\neq \chi^{\star}}\card{\rho_s}^2
      +\card{\rho_{\chi^{\star}}}^2\sum\limits_{r\neq \chi^{\star}}\card{\kappa_r}^2}\\
    &+
    C_m\beta_{n-1,d,d'}\sqrt{\sum\limits_{t\in T} \card{\psi_t}^2}\sqrt{1-\card{\kappa_{\chi^{\star}}}^2}\sqrt{1-\card{\rho_{\chi^{\star}}}^2}.
    \end{align*}
    Denoting $\beta'' = \max((1-c/2)\beta_{n-1,d-1,d'-1},\beta_{n-1,d,d'})$ we get by Cauchy-Schwarz that
    \begin{align*}
    & \beta_{n,d,d'}\\
    &\leq \beta''\sqrt{\sum\limits_{\chi\in \hat{H}}\card{\psi_{\chi}}^2+\sum\limits_{t\in T} \card{\psi_t}^2}
    \sqrt{\sum\limits_{\chi\in\hat{H}}\card{\kappa_{\chi}}^2\card{\rho_{\chi}}^2+
    (1-c/2)\card{\kappa_{\chi^{\star}}}^2\sum\limits_{s\neq \chi^{\star}}\card{\rho_s}^2
    +(1-c/2)\card{\rho_{\chi^{\star}}}^2\sum\limits_{r\neq \chi^{\star}}\card{\kappa_r}^2}\\
    &+
    C_m\beta''\sqrt{\sum\limits_{t\in T} \card{\psi_t}^2}\sqrt{1-\card{\kappa_{\chi^{\star}}}^2}\sqrt{1-\card{\rho_{\chi^{\star}}}^2}.
    \end{align*}
    We have that
    \[
    \sum\limits_{\chi\in\hat{H}}\card{\kappa_{\chi}}^2\card{\rho_{\chi}}^2+
    \card{\kappa_{\chi^{\star}}}^2\sum\limits_{s\neq \chi^{\star}}\card{\rho_s}^2
    +\card{\rho_{\chi^{\star}}}^2\sum\limits_{r\neq \chi^{\star}}\card{\kappa_r}^2
    \leq
    \sum\limits_{r,s}\card{\kappa_r}^2\card{\rho_s}^2
    =1,
    \]
    so we conclude that
    \begin{align*}
    \beta_{n,d,d'}
    &\leq \beta''
    \sqrt{1- \frac{c}{2}\left(\card{\kappa_{\chi^{\star}}}^2\sum\limits_{s\neq \chi^{\star}}\card{\rho_s}^2+\card{\rho_{\chi^{\star}}}^2
    \sum\limits_{r\neq \chi^{\star}}\card{\kappa_r}^2\right)}\\
    &+
    C_m\beta''\sqrt{\sum\limits_{t\in T} \card{\psi_t}^2}\sqrt{1-\card{\kappa_{\chi^{\star}}}^2}\sqrt{1-\card{\rho_{\chi^{\star}}}^2}.
    \end{align*}
    Using $\sum\limits_{s\neq \chi^{\star}}\card{\rho_s}^2 = 1-\card{\rho_{\chi^{\star}}}^2$ and
    $\sum\limits_{r\neq \chi^{\star}}\card{\kappa_r}^2 = 1-\card{\kappa_{\chi^{\star}}}^2$ we get
    \begin{align*}
    \beta_{n,d,d'}
    &\leq \beta''
    \sqrt{1- \frac{c}{4}\left(\card{\kappa_{\chi^{\star}}}^2+\card{\rho_{\chi^{\star}}}^2-2\card{\kappa_{\chi^{\star}}}^2\card{\rho_{\chi^{\star}}}^2\right)}\\
    &+
    C_m\beta''\sqrt{\sum\limits_{t\in T} \card{\psi_t}^2}\sqrt{1-\card{\kappa_{\chi^{\star}}}^2}\sqrt{1-\card{\rho_{\chi^{\star}}}^2}.
    \end{align*}
    As $\sqrt{1-a}\leq 1-a/2$ and $\sqrt{ab}\leq \frac{1}{2}(a+b)$ for non-negative $a,b$ and $\sum\limits_{t\in T} \card{\psi_t}^2\leq \eta$, we get that
    \[
    \frac{\beta_{n,d,d'}}{\beta''}
    \leq
    \left(1- \frac{c}{8}\left(\card{\kappa_{\chi^{\star}}}^2+\card{\rho_{\chi^{\star}}}^2-2\card{\kappa_{\chi^{\star}}}^2\card{\rho_{\chi^{\star}}}^2\right)\right)
    +C_m\sqrt{\eta}\left(2-\card{\kappa_{\chi^{\star}}}^2  - \card{\rho_{\chi^{\star}}}^2\right).
    \]
    We write the right hand side as $1-\frac{c}{8} P(\card{\kappa_{\chi^{\star}}}^2, \card{\rho_{\chi^{\star}}}^2)$, where
    $P\colon [0,1]^2\to\mathbb{R}$ is defined as
    \[
    P(a,b) = a+b-2ab - \frac{8C_m}{c}\sqrt{\eta}(2-a-b).
    \]
    \begin{claim}\label{claim:bound_calculus}
      Given~\eqref{eq:hier}, we have that $P(a,b)\geq 0$ for all $a,b\geq 1-6\tau$.
    \end{claim}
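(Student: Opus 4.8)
The plan is to eliminate the two variables by the affine substitution $a = 1-s$, $b = 1-t$, which is legitimate since the claim concerns $a,b \in [1-6\tau,1]$, so $0 \le s,t \le 6\tau$. First I would expand the quadratic piece: since $ab = 1 - s - t + st$, a direct computation gives $a + b - 2ab = s + t - 2st$ and $2 - a - b = s + t$. Substituting into the definition of $P$ yields
\[
P(a,b) = (s+t) - 2st - \frac{8C_m}{c}\sqrt{\eta}\,(s+t) = (s+t)\Bigl(1 - \frac{8C_m}{c}\sqrt{\eta}\Bigr) - 2st.
\]
If $s = t = 0$ then $P(a,b) = 0$ and there is nothing to prove, so I may assume $s + t > 0$.

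Next I would control the cross term by AM--GM: $2st \le s^2 + t^2$, and since $s,t \le 6\tau$ we have $s^2 \le 6\tau s$ and $t^2 \le 6\tau t$, hence $2st \le 6\tau(s+t)$. Plugging this in gives
\[
P(a,b) \ge (s+t)\Bigl(1 - 6\tau - \frac{8C_m}{c}\sqrt{\eta}\Bigr).
\]
It then remains to verify that the parenthesised quantity is nonnegative. This is exactly where the hierarchy~\eqref{eq:hier} is used: $\tau$ is small enough (relative to absolute constants) that $6\tau \le \tfrac12$, and since $\eta$ is chosen only after $m,\alpha$ are fixed — hence after $c = c(m,\alpha)$ and after the constant $C_m$ — it may be taken small enough that $\frac{8C_m}{c}\sqrt{\eta} \le \tfrac12$. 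Therefore $1 - 6\tau - \frac{8C_m}{c}\sqrt{\eta} \ge 0$, and as $s + t > 0$ we conclude $P(a,b) \ge 0$.

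I do not expect any genuine obstacle here: the statement is an elementary calculus fact once the substitution is made. The only point requiring care is the bookkeeping with~\eqref{eq:hier} — one must make sure the error term $\frac{8C_m}{c}\sqrt{\eta}$, whose coefficient depends on $m$ and $\alpha$, is dominated by choosing $\eta$ sufficiently small after $c$ and $C_m$, and that $\tau$ is small relative to absolute constants; both are consistent with the ordering $0 \ll \eps \ll \eta \ll \tau \ll c \ll m^{-1},\alpha$ already in force in the surrounding argument.
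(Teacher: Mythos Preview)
Your argument is correct. The paper takes a different but equally elementary route: it computes the partial derivatives $\frac{\partial P}{\partial a} = 1 - 2b + \frac{8C_m}{c}\sqrt{\eta}$ and $\frac{\partial P}{\partial b} = 1 - 2a + \frac{8C_m}{c}\sqrt{\eta}$, observes (via the hierarchy) that both are nonpositive once $a,b \ge \tfrac12 + \tfrac{4C_m}{c}\sqrt{\eta}$, and concludes by monotonicity that $P(a,b) \ge P(1,1) = 0$ on the region in question. Your substitution $a=1-s$, $b=1-t$ together with the direct bound $2st \le 6\tau(s+t)$ avoids calculus entirely and makes the dependence on the hierarchy fully explicit; the paper's monotonicity argument is a bit more concise but relies on the same parameter ordering. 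Both proofs implicitly use $a,b\le 1$, which is justified in context since $a=|\kappa_{\chi^\star}|^2$ and $b=|\rho_{\chi^\star}|^2$ are squared coefficients in an orthonormal decomposition.
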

    \begin{proof}
      Looking at partial derivatives, we see that $\frac{\partial P}{\partial a} = 1-2b + \frac{8C_m}{c}\sqrt{\eta}$,
      and similarly $\frac{\partial P}{\partial b} = 1-2a + \frac{8C_m}{c}\sqrt{\eta}$.
      It follows that $P$ is decreasing in both variables in the range that $a,b \geq 1/2 + \frac{4C_m}{c}\sqrt{\eta}$, hence
      for all $a,b\geq 1-6\tau$ we have that $P(a,b)\geq P(1,1) = 0$.
    \end{proof}
    Combining Claim~\ref{claim:bound_calculus} and~\eqref{eq:beta_arg_most_mass_embed}, it follows that
    $\frac{\beta_{n,d,d'}}{\beta''}\leq 1-\frac{c}{8} P(\card{\kappa_{\chi^{\star}}}^2, \card{\rho_{\chi^{\star}}}^2)\leq 1$, concluding the proof.
    \qed

\subsection{Conclusion of the Reduction to Near Linear Degree: Iterating Lemma~\ref{lem:reduce_beta}}
To conclude this section, we iterate Lemma~\ref{lem:reduce_beta} and get that we can either get an exponential
upper bound on our parameter $\beta$, or else we can reduce to the case the number of variables $n$ is proportional
to the non-embedding degree $d$, and furthermore the non-embedding degree and effective non-embedding degree stay
roughly the same. Formally:
\begin{lemma}\label{lem:beta_bound_iterated}
  For all $m\in\mathbb{N}$ and $\alpha>0$ there are $L>0$ and $\eps>0$ such that the following holds for all $\xi>0$.
  Let $\mu$ be as in Theorem~\ref{thm:nonembed_homogenous}, and let $n\geq d\geq d'$
  be integers such that $d'\geq d^{1-\xi}$. Then either:
  \begin{enumerate}
    \item $\beta_{n,d,d'}[\mu]\leq (1-\eps)^{d'/2}$.
    \item Else, there are integers $e\geq d/2$, $e'\geq d'/2$ and $n'\leq 2L e$.
    such that$\beta_{n,d,d'}[\mu]\leq \beta_{n',e,e'}$.
  \end{enumerate}
\end{lemma}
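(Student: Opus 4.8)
The plan is to prove Lemma~\ref{lem:beta_bound_iterated} by iterating Lemma~\ref{lem:reduce_beta} and carefully bookkeeping how the three parameters evolve. Fix $\eps$ and $L$ as produced by Lemma~\ref{lem:reduce_beta}; enlarging $L$ if necessary we may assume $L\geq 2$ (this only weakens the hypothesis of that lemma, so it still applies, and it is exactly the slack exploited by the factor $2$ in $n'\leq 2Le$). First I would record the a priori bound $\beta_{n,d,d'}[\mu]\leq 1$ for every admissible triple: since $\mu_{y,z}$ is uniform, Cauchy--Schwarz together with the independence of $y$ and $z$ gives $\card{\Expect{(x,y,z)\sim\mu^{\otimes n}}{F(y,z)g(y)h(z)}}\leq \norm{F}_2\norm{g}_2\norm{h}_2$. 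Two corners are then immediate: if $d=0$ then $d'=0$ and $\beta_{n,d,d'}[\mu]\leq 1=(1-\eps)^{d'/2}$ (case~1), and if $n<Ld$ then $(e,e',n')=(d,d',n)$ already witnesses case~2, since $n<Ld<2Ld$. So assume $d\geq 1$ and $n\geq Ld$.

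Now I would run the following process. Starting from $(n_0,d_0,d_0')=(n,d,d')$ and a type-B counter $b_0=0$, while $b_i<\lceil d'/2\rceil$ and $n_i\geq Ld_i$, apply Lemma~\ref{lem:reduce_beta} to $\beta_{n_i,d_i,d_i'}$. If the maximum is attained at $\beta_{n_i-1,d_i,d_i'}$ (a ``type~A'' step), set $(n_{i+1},d_{i+1},d_{i+1}')=(n_i-1,d_i,d_i')$ and $b_{i+1}=b_i$; if it is attained at $(1-\eps)\beta_{n_i-1,d_i-1,d_i'-1}$ (a ``type~B'' step), set $(n_{i+1},d_{i+1},d_{i+1}')=(n_i-1,d_i-1,d_i'-1)$ and $b_{i+1}=b_i+1$. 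At every stage $\beta_{n,d,d'}[\mu]\leq (1-\eps)^{b_i}\beta_{n_i,d_i,d_i'}$, and a short induction shows the triples stay in range while the loop runs: since $b_i\leq \lceil d'/2\rceil-1$ and $d$ drops only on type-B steps, $d_i=d-b_i\geq d-\lceil d'/2\rceil+1\geq \lfloor d/2\rfloor+1\geq 1$, one has $d_i\geq d_i'\geq 1$, and $n_i\geq Ld_i-1\geq 2d_i-1\geq d_i$, so $\beta_{n_i,d_i,d_i'}$ is well defined throughout. As $n_i$ strictly decreases and is bounded below by $d_i\geq 1$, the loop terminates.

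It terminates because $b_i$ has reached $\lceil d'/2\rceil$, or (with $b_i<\lceil d'/2\rceil$) because $n_i<Ld_i$ for the first time. In the first case $\beta_{n,d,d'}[\mu]\leq (1-\eps)^{\lceil d'/2\rceil}\beta_{n_i,d_i,d_i'}\leq (1-\eps)^{d'/2}$, which is case~1. In the second case I would note that the final step must be type~A, because the quantity $n_j-Ld_j$ decreases by $1$ on a type-A step but \emph{increases} by $L-1\geq 1$ on a type-B step, so it can only become negative through a type-A step; hence $d_i=d_{i-1}$ and $n_i=n_{i-1}-1\geq Ld_{i-1}-1=Ld_i-1$, so $Ld_i-1\leq n_i<Ld_i$ and in particular $n_i<2Ld_i$. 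Taking $e=d_i$, $e'=d_i'$, $n'=n_i$ we get $n'\leq 2Le$, $\beta_{n,d,d'}[\mu]\leq (1-\eps)^{b_i}\beta_{n',e,e'}\leq \beta_{n',e,e'}$, and using $b_i\leq \lceil d'/2\rceil-1$ and $d'\leq d$,
\[
e=d-b_i\geq d-\lceil d'/2\rceil+1\geq \tfrac{d}{2},\qquad e'=d'-b_i\geq d'-\lceil d'/2\rceil+1=\lfloor d'/2\rfloor+1\geq \tfrac{d'}{2},
\]
which is case~2. I do not anticipate a genuine obstacle here: the only care needed is the ceiling/floor bookkeeping (handled by capping the type-B count at $\lceil d'/2\rceil$ and allowing the factor $2$ in $n'\leq 2Le$) and checking that every intermediate triple remains a legitimate input to Lemma~\ref{lem:reduce_beta}, both routine once $L\geq 2$; the hypothesis $d'\geq d^{1-\xi}$ plays no role in this step and is simply carried along.
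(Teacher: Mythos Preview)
Your proposal is correct and follows essentially the same approach as the paper: iterate Lemma~\ref{lem:reduce_beta}, count the number of ``type~B'' steps, and split into cases according to whether that count reaches $d'/2$. Your treatment is in fact more careful than the paper's in handling corner cases ($d=0$, $n<Ld$ at the start), verifying that intermediate triples stay in range, and tracking the ceiling/floor arithmetic; your observation that the terminating step must be type~A (since $n_i-Ld_i$ can only drop on type~A steps once $L\geq 2$) is a nice touch that the paper elides by simply writing $n'<Ld\leq 2Le$.
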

\begin{proof}
  We apply Lemma~\ref{lem:reduce_beta} so long as possible. When we can no longer do that, let $s$ be the number of times
  it was the case that the upper bound was $(1-\eps)\beta_{n-1,d-1,d'-1}$. If $s\geq d'/2$, then we clearly got a factor of $1-\eps$
  $s$ times, and using the trivial bound of $1$ on all $\beta$'s it follows that the first item holds.

  Else, $s\leq d'/2$ and hence the non-embedding degree parameter $e$ in the end has dropped by at most $s$,
  so $e\geq d - s\geq d/2$. Also, the effective non-embedding degree lower bound parameter $e'$ also drops
  by at most $s$, hence we get that $e'\geq d'-s\geq d'/2$. Since we can no longer apply Lemma~\ref{lem:reduce_beta}
  it follows that the number of variables we reached to is $n'\leq L\cdot d\leq 2L e$, and also that
  \[
  \beta_{n,d,d'}[\mu]
        \leq
        (1-\eps)^s\beta_{n',e,e'}[\mu]
        \leq
        \beta_{n',e,e'}[\mu].
        \qedhere
  \]
\end{proof}
\section{Proof of Theorem~\ref{thm:nonembed_homogenous}: Proof for Near Linear Degrees}\label{sec:prove_near_lin}
In this section we start off where Section~\ref{sec:reduce_to_near_linear} ended, and handle the case
that the number of variables $n$ is as in Lemma~\ref{lem:beta_bound_iterated}. In that case, we present a different inductive
argument, similar to the one in~\cite[Section 6]{BKMcsp2}, that establishes an exponential upper bound on $\beta$. This
is the part in our argument in which we use the relaxed base case.

\subsection{The Parameter $\delta$}
In contrast to Section~\ref{sec:reduce_to_near_linear}, the argument we present herein will not be able to preserve homogeneity and instead
will reduce degrees in a controlled manner. To facilitate that, we define a modification of the class $\mathcal{F}_{n,d,d'}$ as follows:
\begin{definition}
  We define the class $\mathcal{F}_{n,d'}$ as the class of functions $F\colon \Gamma^n\times\Phi^n\to\mathbb{C}$ that are constant on connected
  components and have effective non-embedding degree at least $d'$.
\end{definition}
We now define a variant of the parameter $\beta$ that is central to this section.
\begin{definition}
    For integers $n\geq d'$, finite alphabets $\Sigma$, $\Gamma$, $\Phi$ and a distribution $\mu$ over $\Sigma\times\Gamma\times \Phi$ as in
    Theorem~\ref{thm:nonembed_deg_must_be_small_rephrase_maximal_relaxed}, we define
  \[
         \delta_{n,d'}[\mu] =
        \sup\limits_{\substack{F\in \mathcal{F}_{n,d'}\\ g\colon \Gamma^n\to\mathbb{C}\\ h\colon\Phi^n\to\mathbb{C}}}
        \frac{\card{\Expect{(x,y,z)\sim \mu^{\otimes n}}{F(y,z)g(y)h(z)}}}{\norm{F}_2\norm{g}_2\norm{h}_2}.
  \]
\end{definition}
  When the distribution $\mu$ is clear from context, we will omit it from the notation and simply write $\delta_{n,d'}$.
  With the parameter $\delta_{n,d'}$ in hand, we can now state the main inductive statement of this section.
  \begin{lemma}\label{lem:delta_inductive}
  For all $m\in\mathbb{N}$ and $\alpha>0$ there is $C>0$ such that the following holds.
  For integers $n\geq d'$ such that $n\leq d' A$ and $A\geq 2$, finite alphabets $\Sigma$, $\Gamma$, $\Phi$ and a distribution $\mu$ over $\Sigma\times\Gamma\times \Phi$ as in
  Theorem~\ref{thm:nonembed_deg_must_be_small_rephrase_maximal_relaxed}, we have that
  \[
         \delta_{n,d'}[\mu] \leq \left(1-\frac{1}{A^C}\right)\delta_{n-1,d'-1}[\mu].
  \]
\end{lemma}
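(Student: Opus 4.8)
The proof follows the template of the near-linear argument in Section~6 of~\cite{BKMcsp2}, but must incorporate the relaxed base case (Statement~\ref{statement:relaxed_base_case}, as encoded in Definition~\ref{def:relaxed_base}) rather than an ideal base case. Let $F \in \mathcal{F}_{n,d'}$, $g\colon\Gamma^n\to\mathbb{C}$, $h\colon\Phi^n\to\mathbb{C}$ be functions of $2$-norm $1$ achieving (close to) the supremum $\delta_{n,d'}$. First I would fix a coordinate $j$ and decompose each of $F$, $g$, $h$ along the split $[n] = I\cup J$ with $J = \{j\}$ via singular value decompositions analogous to Claims~\ref{claim:svd},~\ref{claim:svd_effective}, but now allowing the base functions on the $j$-th coordinate to range over $B_{{\sf embed}}\cup B_{{\sf non-embed}}\cup B_{{\sf modest}}$ rather than insisting on homogeneity. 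Concretely, I would write
\[
F(y,z) = \sum_{\chi\in\hat{H}}\psi_\chi F_\chi(y_I,z_I)(W\chi\circ\sigma)(y_j,z_j)
+ \sum_{t} \psi_t F_t(y_I,z_I) F_t'(y_j,z_j),
\]
where the second sum is split further according to whether $F_t'\in B_{{\sf non-embed}}$ (in which case $F_t$ still has effective non-embedding degree $\geq d'$) or $F_t'\in B_{{\sf modest}}$ (in which case $F_t$ has effective degree $\geq d'-1$), and the first sum collects the embedding components on coordinate $j$. Do the same for $g$ and $h$, with the embedding parts $g_\pi' = \pi\circ\gamma$ and $h_\lambda' = \lambda\circ\phi$.

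\textbf{The choice of the splitting coordinate.} Since $n\leq d'A$ and $F$ has effective non-embedding degree at least $d'$, the total modest influence $I_{\text{modest}}[F] \geq 2d'$, so an \emph{average} coordinate has modest influence $\Omega(d'/n) = \Omega(1/A)$; but I actually want the \emph{opposite} — I want to ensure the mass of $F$ on coordinate $j$ that comes from $B_{{\sf modest}}$ (the ``effective'' part) is not too large, so that after peeling off one coordinate the effective degree drops by at most $1$ and we retain $\Omega(\delta_{n-1,d'-1})$. By the analogue of Claim~\ref{claim:coefs_svd_effective}, $\sum_{t: F_t'\in B_{{\sf modest}}}\psi_t^2 = \tfrac12 I_{j,\text{modest}}[F]$, and $\mathbb{E}_j I_{j,\text{modest}}[F] \leq I_{\text{modest}}[F]/n$. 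This is not small in general, so instead I would follow the $99\%$-vs-$1\%$ dichotomy from~\cite[Section 6]{BKMcsp2}: either for some coordinate $j$ the modest mass is $\geq$ some threshold $\theta = \theta(A)$, in which case I invoke the relaxed base case on that coordinate (the key point being that a $B_{{\sf modest}}$ component on coordinate $j$, by Definition~\ref{def:relaxed_base}, forces a gain of $1-c\theta^C$ over Cauchy-Schwarz when combined with the independence of $y_j,z_j$ under $\mu_{y,z}$), getting $\delta_{n,d'}\leq (1-c\theta^C)\delta_{n-1,d'-1}$; or else \emph{every} coordinate has modest mass $< \theta$, in which case $I_{\text{modest}}[F] < \theta n \leq \theta d' A$, and since $I_{\text{modest}}[F]\geq 2d'$ this already forces $\theta \geq 2/A$ — so choosing $\theta = 1/A$ the second case is empty and we are always in the first case. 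Combining with the Cauchy-Schwarz bookkeeping (the cross terms involving the embedding parts are handled exactly as in Lemmas~\ref{lem:bound_main_term_beta}--\ref{lem:bound_additivebase_beta}, using Claim~\ref{claim:ortho_of_embed_nonembed} to kill mismatched terms and Claim~\ref{claim:additive_base_case} for the additive-base-case terms), we obtain $\delta_{n,d'}\leq (1-c A^{-C})\delta_{n-1,d'-1}$ for suitable $C = C(m,\alpha)$.

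\textbf{The main obstacle.} The delicate point is that the relaxed base case (Definition~\ref{def:relaxed_base}) is a statement about \emph{univariate} functions $f\colon\Sigma\to\mathbb{C}$, $g\colon\Gamma\to\mathbb{C}$, $h\colon\Phi\to\mathbb{C}$ with variance at least $\tau$ on $\Sigma_{{\sf modest}}$, whereas after the SVD on coordinate $j$ the objects I want to feed in are the $j$-th-coordinate slices $F_t'$, combined with cross-coefficients $\widehat{F_t}(r,s)$ that entangle them. Making the reduction rigorous requires showing that the ``effective'' portion of the $j$-th coordinate (the part landing in $B_{{\sf modest}}$, which by construction has variance $\Omega(1)$ on $\Sigma_{{\sf modest}}$ after normalization) genuinely pulls down the expectation: one re-casts $\sum_t \psi_t \widehat{F_t}(r,s) \widehat{F_t'}(r,s)$ (over the relevant index set) as a single univariate expectation $\mathbb{E}_{(x,y,z)\sim\mu}[\tilde f(x)(\tilde g(y)+\tilde h(z))]$ or similar, where $\tilde f$ inherits the modest-variance lower bound from the $B_{{\sf modest}}$ components, and then applies Definition~\ref{def:relaxed_base} with $\tau = \Omega(\theta) = \Omega(1/A)$ to gain $1 - c\,\tau^C = 1 - c A^{-C}$. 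The bookkeeping here is intricate because one must simultaneously track the effective-degree drop (at most $1$) and ensure the normalizations $\sum\psi_t^2, \sum\kappa_r^2, \sum\rho_s^2$ stay $\leq 1$; I expect this aggregation step — showing that the several sums in the expansion of $\mathbb{E}[FgH]$ can be combined, via Cauchy-Schwarz and Hölder exactly as in the proof of Lemma~\ref{lem:reduce_beta}, into a bound of the form $\delta_{n-1,d'-1}\cdot\sqrt{1 - \Omega(A^{-C})}$ plus lower-order error — to be the bulk of the work, but it is essentially a variation of the already-completed argument in Section~\ref{sec:reduce_to_near_linear} with $1-c$ replaced by $1-cA^{-C}$ and the roles of ``non-embedding'' and ``effective non-embedding'' swapped.
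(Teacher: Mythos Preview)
Your starting point matches the paper: both pick a coordinate $j$ with $I_{j,\text{modest}}[F]\geq 2d'/n\geq 1/A$, and both write the SVD on that coordinate. But from there the paper takes a very different route, and your proposed direct adaptation of the Section~\ref{sec:reduce_to_near_linear} argument has a genuine gap.

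The Section~\ref{sec:reduce_to_near_linear} argument (Lemmas~\ref{lem:bound_main_term_beta}--\ref{lem:bound_additivebase_beta}) hinged on $\sum_{t\in T}\psi_t^2\leq\eta$ being \emph{tiny}: most of the mass on coordinate $j$ was on a single embedding character $\chi^\star$, and the additive recasting in Lemma~\ref{lem:bound_additivebase_beta} worked precisely because the embedding relation $\chi^\star(\phi(z))=\chi^\star(-\sigma(x))\chi^\star(-\gamma(y))$ let you fold the $(\chi^\star,s)$ and $(r,\chi^\star)$ terms into one expression $\mathbb{E}[\tilde f(x)(\tilde g(y)+\tilde h(z))]$. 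In the near-linear regime there is no dominant embedding component and no such folding is available; moreover the relaxed base case is a \emph{multiplicative} statement $|\mathbb{E}[f(x)g(y)h(z)]|\leq(1-c\tau^C)\|f\|_2\|g\|_2\|h\|_2$, not an additive one, so even if you could recast the sum additively it would not feed into Definition~\ref{def:relaxed_base}. Concretely: knowing $|\widehat{F_t'}(r,s)|\leq 1-c\tau_t^C$ for each $(r,s)$ and $\sum_t\psi_t^2\tau_t\gtrsim 1/A$ does \emph{not} yield, via any direct Cauchy--Schwarz, a $(1-\Omega(A^{-C}))$ factor on the whole sum.

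The paper instead assumes for contradiction that the naive Cauchy--Schwarz bound on $(\rom{1})$ is nearly tight (equations~\eqref{eq:assume_delta_1}--\eqref{eq:assume_delta_3}), and then runs a \emph{random perturbation argument}: it defines $p(\{\pi_r\},\{\theta_s\})=\sum_t|\langle F_t,\overline{\tilde g\tilde h}\rangle|^2$ for random rotations $(\pi_r),(\theta_s)$, shows $p$ is pointwise $\leq\delta_{n-1,d'-1}^2$ with expectation $\geq(1-O(\sqrt\zeta))\delta_{n-1,d'-1}^2$, hence $\mathrm{var}(p)$ is tiny. On the other hand, $\mathrm{var}(p)$ is lower-bounded by $|\langle V_{r,s},V_{r',s'}\rangle|^2$ for distinct $(r,s),(r',s')$, where $V_{r,s}=(\widehat{F_t}(r,s))_t$. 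The relaxed base case enters only inside the proof that such a non-orthogonal pair exists (Lemma~\ref{lem:v_not_perp}): if all $V_{r,s}$ were nearly orthogonal, a dimensionality count (Fact~\ref{fact:dimensionality}) would force $|T'|\geq|R'||S'|$, which via~\eqref{eq:delta_7} implies each $f_t'$ is $\zeta^{1/4}$-close to some $g_r'h_s'$ and hence by the relaxed base case has small variance on $\Sigma_{\sf modest}$ --- contradicting Claim~\ref{claim:large_modestinf_sig_var} and the choice of $j$. This indirect, variance-based invocation of the base case is the missing idea.
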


The rest of this section is devoted to the proof of Lemma~\ref{lem:delta_inductive}. We first establish a singular-value decomposition stated similar
to Claim~\ref{claim:svd_effective} but for the case that $F$ is not homogenous, and relate this decomposition with the modest influences similarly to
Claim~\ref{claim:coefs_svd_effective}.

\subsection{The SVD Decomposition and Relation to Modest Influences}
We are going to need the following two singular-value decomposition statements, and as usual we are going to have a partition
$I,J$ of $[n]$ into $\card{I} = n-1$ and $\card{J}=1$. The first statement addresses the functions $g$ and $h$:
\begin{claim}\label{claim:svd_effective_nonhomo_g}
		Suppose $g\colon\Gamma^n\to\mathbb{C}$ has $2$-norm equal to $1$. Then we may write
		\[
		g(y) = \sum\limits_{r\in R} \kappa_r g_r(y_I) g_r'(y_J),
		\]
		where each $\kappa_r\geq 0$ and
		\begin{enumerate}
			\item For $r\in R$, $F_r\colon \Gamma^{I}\to\mathbb{C}$ is an orthonormal set of functions.
            \item For $r\in R$, $F_r'\colon \Gamma^{J}\to\mathbb{C}$ is an orthonormal set of functions.
			\item $\sum\limits_{r\in R}\kappa_r^2 = 1$.
		\end{enumerate}
\end{claim}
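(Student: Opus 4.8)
This statement, Claim~\ref{claim:svd_effective_nonhomo_g}, is the standard singular value decomposition of a function with respect to a tensor product structure on the coordinates, specialized to the split $[n] = I \cup J$ with $|J| = 1$. The plan is to view $g$ as an element of the tensor product Hilbert space $L_2(\Gamma^I; \mu_y^{\otimes I}) \otimes L_2(\Gamma^J; \mu_y)$, apply the finite-dimensional SVD to the associated linear operator, and read off the claimed orthonormal families and nonnegative coefficients.

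Concretely, first I would fix an orthonormal basis $\{e_1, \dots, e_k\}$ of the finite-dimensional space $L_2(\Gamma^J; \mu_y)$ (where $k = |\Gamma|$), and expand $g$ in the $J$-coordinate as $g(y) = \sum_{i=1}^k g^{(i)}(y_I) e_i(y_J)$ for uniquely determined functions $g^{(i)} \in L_2(\Gamma^I; \mu_y^{\otimes I})$. This identifies $g$ with the linear map $T \colon L_2(\Gamma^J; \mu_y) \to L_2(\Gamma^I; \mu_y^{\otimes I})$ sending $e_i \mapsto g^{(i)}$ (equivalently, $g$ is the matrix of $T$ in suitable bases). Next I would apply the singular value decomposition to $T$: there are orthonormal families $\{g_r\}_{r \in R}$ in $L_2(\Gamma^I; \mu_y^{\otimes I})$ and $\{g_r'\}_{r \in R}$ in $L_2(\Gamma^J; \mu_y)$, together with singular values $\kappa_r > 0$, such that $T = \sum_{r \in R} \kappa_r \, g_r \langle g_r', \cdot \rangle$; translating back, this gives exactly $g(y) = \sum_{r \in R} \kappa_r g_r(y_I) g_r'(y_J)$, with $R$ the index set of nonzero singular values. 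The orthonormality of $\{g_r\}$ and $\{g_r'\}$ is item (1) and item (2). For item (3), I would use $1 = \|g\|_2^2 = \|T\|_{\mathrm{HS}}^2 = \sum_{r \in R} \kappa_r^2$, where the Hilbert--Schmidt norm identity is the standard fact that the sum of squares of singular values equals the squared Frobenius/Hilbert--Schmidt norm, and $\|T\|_{\mathrm{HS}}^2 = \sum_i \|g^{(i)}\|_2^2 = \|g\|_2^2$ by Parseval across the $J$-coordinate.

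The one point requiring a small amount of care, rather than an obstacle, is making sure the inner product structure is the right one: the SVD must be taken with respect to the inner products $\langle \cdot, \cdot \rangle_{\mu_y^{\otimes I}}$ and $\langle \cdot, \cdot \rangle_{\mu_y}$, so that orthonormality in the conclusion means orthonormality in $L_2$ with the measure $\mu_y$, which is what later parts of the argument will use. Since all spaces here are finite-dimensional complex inner product spaces, the classical SVD applies verbatim and there are no convergence or measurability issues. I do not expect any genuine difficulty; this is a bookkeeping lemma and the proof is essentially the sentence ``apply SVD to the $(n-1)$-vs-$1$ unfolding of $g$,'' with item (3) coming from the Hilbert--Schmidt norm identity. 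The analogous statement for $h$ over $\Phi$ is identical with $\Gamma$ replaced by $\Phi$ and $\mu_y$ by $\mu_z$, so I would prove the two simultaneously or just state that the proof for $h$ is verbatim the same.
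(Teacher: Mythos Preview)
Your proposal is correct and takes essentially the same approach as the paper. The paper simply says the proof is similar to that of Claim~\ref{claim:svd} but simpler since no discussion of invariant spaces is needed, and omits the details; your write-up is precisely the standard SVD of the $(n-1)$-vs-$1$ unfolding that the paper is referring to, with the Hilbert--Schmidt identity giving item~(3).
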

\begin{proof}
  The proof is similar to the proof of Claim~\ref{claim:svd} and is simpler (no discussion of invariant spaces is needed), and we omit the details.
\end{proof}

The second singular-value decomposition statement addresses the function $F$:
\begin{claim}\label{claim:svd_effective_nonhomo}
		Suppose $F\colon\Gamma^n\times \Phi^n\to\mathbb{C}$ has $2$-norm equal to $1$, is constant on connected components and has effective non-embedding degree at least $d'$.
        Then we may write
		\[
		F(y,z) = \sum\limits_{t\in T} \psi_t F_t(y_I,z_I) F_t'(y_J,z_J),
		\]
		where each $\psi_t\geq 0$ and
		\begin{enumerate}
			\item For $t\in T$, $F_t\colon \Gamma^{I}\times\Phi^{I}\to\mathbb{C}$ is an orthonormal set of functions.
            \item For $t\in T$, $F_t'\colon \Gamma^{J}\times \Phi^J\to\mathbb{C}$ is an orthonormal set of functions.
			\item For $t\in T$, $F_t$ has effective non-embedding degree at least $d'-1$.
			\item For $t\in T$, the functions $F_t$ and $F_t'$ are constant on connected components.
			\item $\sum\limits_{t\in T}\psi_t^2 = 1$.
		\end{enumerate}
\end{claim}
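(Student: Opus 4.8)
The statement to prove is Claim~\ref{claim:svd_effective_nonhomo}: a singular-value-type decomposition of a function $F\colon\Gamma^n\times\Phi^n\to\mathbb{C}$ with $\norm{F}_2=1$ that is constant on connected components and has effective non-embedding degree at least $d'$, splitting off one coordinate $J=\{j\}$.

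\textbf{The plan.} First I would reduce the problem from functions on $\Gamma^n\times\Phi^n$ to functions on $\Sigma^n$. Since $F$ is constant on connected components, Lemma~\ref{lem:constant_on_ccs_equiv} gives a unique $f\colon\Sigma^n\to\mathbb{C}$ with $F=Wf$, and $W$ is a linear isometry onto the subspace of functions constant on connected components (as recorded just before Lemma~\ref{lem:constant_on_ccs_equiv}, $\inner{W\chi}{W\chi'}=\inner{\chi}{\chi'}$). Thus it suffices to produce the analogous decomposition of $f$ in $L_2(\Sigma^n;\mu_x^{\otimes n})$ with respect to the basis $(B_{{\sf embed}}\cup B_{{\sf non-embed}}\cup B_{{\sf modest}})^{\otimes n}$, and then apply $W$ coordinatewise; $W$ factors as a tensor product of the single-coordinate isometry, so products $f_t(x_I)f_t'(x_j)$ get mapped to $F_t(y_I,z_I)F_t'(y_j,z_j)$ with the orthonormality, the ``constant on connected components'' property (automatic, since everything in the image of $W$ is), and the norm identity $\sum\psi_t^2=\norm F_2^2=1$ all preserved.

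\textbf{The construction of the decomposition of $f$.} Write $f(x)=\sum_{b\in B} c_b\,b(x_I)\,b_j(x_j)$ where $B=B_{{\sf embed}}\cup B_{{\sf non-embed}}\cup B_{{\sf modest}}$ is the single-coordinate basis for the $j$-th coordinate and the first factor ranges over the tensor basis on the coordinates in $I$. Grouping by the $j$-th factor, $f=\sum_{u\in B} h_u(x_I)\,u(x_j)$ where $h_u(x_I)=\sum_{\text{tensor }v\text{ on }I} c_{(v,u)} v(x_I)$. The $\{u\}_{u\in B}$ are orthonormal on $\Sigma^{J}$, so $\norm f_2^2=\sum_u \norm{h_u}_2^2$. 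Now for each $u$ with $h_u\ne 0$ set $\psi_u=\norm{h_u}_2$, $f_u=h_u/\psi_u$, $f_u'=u$; this already gives a decomposition with $f_u$ orthonormal? — not quite, since different $h_u$ need not be orthogonal. To fix this I would instead perform an honest SVD: view $f$ as a bilinear form / matrix $M$ indexed by (tensor basis on $I$) $\times$ (basis $B$ on $J$) via $M_{v,u}=c_{(v,u)}$, take its singular value decomposition $M=\sum_t \psi_t\, a_t b_t^{*}$ with $\psi_t>0$, $\{a_t\}$ orthonormal vectors indexed by the $I$-basis and $\{b_t\}$ orthonormal vectors indexed by $B$, and let $f_t(x_I)=\sum_v (a_t)_v\, v(x_I)$, $f_t'(x_j)=\sum_u (b_t)_u\, u(x_j)$. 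Then $\{f_t\}$ and $\{f_t'\}$ are orthonormal (since the $v$'s resp. $u$'s are), $f=\sum_t\psi_t f_t(x_I)f_t'(x_j)$, and $\sum_t\psi_t^2=\norm M_F^2=\norm f_2^2=1$. This is exactly the argument sketched for Claim~\ref{claim:svd} and Claim~\ref{claim:svd_effective}; indeed this claim is strictly easier than Claim~\ref{claim:svd_effective} because there are no invariant subspaces (homogeneity classes) to respect, so no block structure on $M$ is needed.

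\textbf{The effective-degree bookkeeping.} The one nontrivial point is item~3: each $f_t$ has effective non-embedding degree at least $d'-1$. For this I would note that each $b_t$ lies in the span of the $u$'s, i.e.\ $f_t'$ is an arbitrary function of the single coordinate; write $f_t'=f_{t,0}'+f_{t,1}'$ where $f_{t,0}'$ is the component in $\spn(B_{{\sf embed}}\cup B_{{\sf non-embed}})$ (effective degree $0$ in coordinate $j$) and $f_{t,1}'$ the component in $\spn(B_{{\sf modest}})$ (effective degree $1$). The support of $f$ in the tensor basis has $\effnon\ge d'$ on every monomial. A monomial $v\otimes u$ contributing to $f$ has $\effnon(v\otimes u)=\effnon(v)+\effnon_j(u)$ where $\effnon_j(u)\in\{0,1\}$; so whenever $u\in B_{{\sf embed}}\cup B_{{\sf non-embed}}$ we must have $\effnon(v)\ge d'$, and whenever $u\in B_{{\sf modest}}$ we must have $\effnon(v)\ge d'-1$. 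In SVD terms the right singular vectors $b_t$ live inside $\mathrm{rowspan}(M)$, which decomposes as $V_0\oplus V_1$ according to whether the $J$-basis index is modest; but $V_0$ only couples to $I$-monomials with $\effnon\ge d'$ and $V_1$ only to those with $\effnon\ge d'-1$, and these two families of $I$-monomials are orthogonal, so $M$ is block-diagonal with respect to $V_0,V_1$ on the $J$ side and the corresponding orthogonal $I$-subspaces. Hence each singular vector $a_t$ lies entirely within one block, so $f_t$ is supported either on $I$-monomials of $\effnon\ge d'$ or on those of $\effnon\ge d'-1$; in either case $f_t$ has effective non-embedding degree at least $d'-1$. (If the $F_t$ in the statement is meant to incorporate the full $n$ coordinates, one also checks $\effnon(f_t(x_I)f_t'(x_j))\ge d'$ on each surviving monomial; the same block argument gives $\effnon(f_t\otimes f_t')= \effnon(f_t)+\effnon_j(f_t')\ge d'$.) Finally item~4 for $F_t,F_t'$ is immediate since $F_t=Wf_t$ and $F_t'=Wf_t'$ are in the image of $W$.

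\textbf{Expected main obstacle.} The only place requiring care is making the block-diagonal structure argument clean: one must verify that in the matrix $M$ there is genuinely no coupling between $J$-modest columns paired with low-$\effnon$ $I$-rows and $J$-nonmodest columns paired with high-$\effnon$ $I$-rows — i.e.\ that the effective-degree constraint on $f$ forces exactly this block structure. This is a routine but slightly fiddly index computation using additivity of $\effnon$ over tensor coordinates (Fact-style, from the definition of $\effnon$) together with the fact that monomials of different $I$-support are orthogonal; once that is laid out, the SVD respects the blocks and everything else follows from the standard properties of the SVD and of the isometry $W$. I would present it in the appendix alongside the proofs of Claim~\ref{claim:svd} and Claim~\ref{claim:svd_effective}, noting it as the degenerate (non-homogeneous, single-block-per-degree) case of the latter.
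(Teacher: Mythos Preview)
Your overall approach is correct and matches the paper's (implied) argument: the paper says this claim follows the proof of Claim~\ref{claim:svd_effective} but is simpler because no invariant-subspace structure is needed, which is exactly what you do---reduce via $W$ to $f\colon\Sigma^n\to\mathbb{C}$, take a plain SVD of the associated matrix, and push back through $W$.

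There is one genuine flaw in your bookkeeping for item~3. You claim that the matrix $M$ is block-diagonal with respect to the split $V_0\oplus V_1$ on the $J$-side because ``$V_0$ only couples to $I$-monomials with $\effnon\ge d'$ and $V_1$ only to those with $\effnon\ge d'-1$, and these two families of $I$-monomials are orthogonal.'' But these two families are \emph{not} orthogonal: the first is a subset of the second. An $I$-monomial $v$ with $\effnon(v)\ge d'$ can pair with both modest and non-modest $u$'s, so $M$ need not be block-diagonal and a singular vector $a_t$ need not live in a single block.

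Fortunately item~3 follows by a much simpler observation that avoids any block structure. Since every monomial $v\otimes u$ in $f$ has $\effnon(v)+\effnon_j(u)\ge d'$ and $\effnon_j(u)\in\{0,1\}$, \emph{every} $I$-monomial $v$ appearing in $f$ (i.e.\ with $c_{(v,u)}\neq 0$ for some $u$) satisfies $\effnon(v)\ge d'-1$. Hence every column of $M$ is supported on such $v$'s, so the column space of $M$---and in particular each left singular vector $a_t$---lies in the span of $I$-monomials with $\effnon\ge d'-1$. That gives item~3 directly, with no need to track which block $f_t'$ sits in.
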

\begin{proof}
  The proof is similar to the proof of Claim~\ref{claim:svd_effective} and is simpler (no discussion of invariant spaces is needed), and we omit the details.
\end{proof}

The next statement relates the SVD decomposition of $F$ and the variance the $F_t'$ have on $\Sigma_{{\sf modest}}$ with the notion of
modest influences. For that, we define the variance a function on $\Sigma_{{\sf modest}}$:
\begin{definition}
  For a function $f\colon \Sigma \to\mathbb{C}$ we denote ${\sf var}_{\Sigma_{{\sf modest}}}(f) = \Expect{a,b\in \Sigma_{{\sf modest}}}{\card{f(a) - f(b)}^2}$.
\end{definition}

With this definition, the following claim asserts that if a function has large modest influence, then the corresponding parts $F_t'$ have significant
variance on $\Sigma_{{\sf modest}}$. More precisely:
\begin{claim}\label{claim:large_modestinf_sig_var}
        If $J = \{j\}$, then letting $f_t'\colon \Sigma\to\mathbb{C}$ be such that $F_t' = W f_t'$ we have that
		\[
        \sum\limits_{t}\psi_t^2{\sf var}_{\Sigma_{{\sf modest}}}(f_t')\ggg_{m,\alpha} I_{j, {\sf modest}}[F].
        \]
	\end{claim}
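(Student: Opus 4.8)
\textbf{Proof proposal for Claim~\ref{claim:large_modestinf_sig_var}.}

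The plan is to compute both sides in terms of the Fourier/Efron-Stein decomposition with respect to the basis $(B_{{\sf embed}}\cup B_{{\sf non-embed}}\cup B_{{\sf modest}})^{\otimes n}$ and compare. First I would write $F = Wf$ for the unique $f\colon\Sigma^n\to\mathbb{C}$ with $F$ constant on connected components, so that by definition $I_{j,{\sf modest}}[F] = I_{j,{\sf modest}}[f]$, and by Fact~\ref{fact:modest_inf_formula} this equals $2\sum_{\chi:\chi_j\in B_{{\sf modest}}}|\widehat{f}(\chi)|^2$. On the other side, the SVD of Claim~\ref{claim:svd_effective_nonhomo} pushes forward to an analogous decomposition $f = \sum_t \psi_t f_t(x_I) f_t'(x_J)$ with $F_t = Wf_t$, $F_t' = Wf_t'$ (using that $W$ is a tensor isometry and $Wf_t \cdot Wf_t' = W(f_t f_t')$ coordinate-wise, which I would verify quickly). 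Since $j$ is the single coordinate in $J$, for each $t$ the function $f_t'$ is a univariate function on $\Sigma$, which I decompose in the single-coordinate basis as $f_t' = \sum_{\chi_j} \widehat{f_t'}(\chi_j)\chi_j$ where $\chi_j$ ranges over $B_{{\sf embed}}\cup B_{{\sf non-embed}}\cup B_{{\sf modest}}$.

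The key computational step is the identity ${\sf var}_{\Sigma_{{\sf modest}}}(f_t') \ee_{m,\alpha} \sum_{\chi_j\in B_{{\sf modest}}}|\widehat{f_t'}(\chi_j)|^2$. For the lower bound: any $\chi_j\in B_{{\sf embed}}\cup B_{{\sf non-embed}}$ is constant on $\Sigma_{{\sf modest}}$ by construction, so only the $B_{{\sf modest}}$-components contribute to the variance; expanding $|f_t'(a) - f_t'(b)|^2$ over $a,b\in\Sigma_{{\sf modest}}$ and using that the $B_{{\sf modest}}$ basis functions restricted to $\Sigma_{{\sf modest}}$ (with the marginal measure $\mu_x|_{\Sigma_{{\sf modest}}}$, which has atoms bounded below by $\ggg_{m,\alpha} 1$) form an orthogonal system spanning the mean-zero functions there, one gets the claimed two-sided bound with constants depending only on $m,\alpha$. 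Then, using orthonormality of the sets $\{f_t\}$ and the fact that $\{f_t f_t'\}$ is the SVD representation, Parseval gives $\sum_t \psi_t^2 |\widehat{f_t'}(\chi_j)|^2 = \sum_{\chi:\chi_j = \chi_j}|\widehat{f}(\chi)|^2$ for each fixed single-coordinate character $\chi_j$ (this is precisely the statement that projecting onto coordinate-$j$ component $\chi_j$ and summing squared coefficients is basis-independent on the $I$ side). Summing over $\chi_j\in B_{{\sf modest}}$ yields
\[
\sum_t \psi_t^2\,{\sf var}_{\Sigma_{{\sf modest}}}(f_t') \ggg_{m,\alpha} \sum_t \psi_t^2 \sum_{\chi_j\in B_{{\sf modest}}}|\widehat{f_t'}(\chi_j)|^2 = \sum_{\chi:\chi_j\in B_{{\sf modest}}}|\widehat{f}(\chi)|^2 = \tfrac12 I_{j,{\sf modest}}[F],
\]
which is the desired inequality.

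The main obstacle I anticipate is the bookkeeping around the change of basis between the SVD coordinates $(t)$ and the Fourier coordinates $(\chi)$: specifically, justifying that $\sum_t \psi_t^2|\widehat{f_t'}(\chi_j)|^2$ correctly recovers $\sum_{\chi: \chi_j \text{ fixed}}|\widehat{f}(\chi)|^2$ requires noting that the SVD is obtained by diagonalizing the Gram operator of $f$ viewed as a map from $L_2(\Sigma^J;\mu_x^J)$ to $L_2(\Sigma^I;\mu_x^I)$, and that the singular vectors $f_t'$ on the $J$-side together with their squared weights $\psi_t^2$ encode exactly the same data as the coordinate-$j$ spectral projections — this is the standard fact that $\sum_t \psi_t^2 \langle f_t', e\rangle \langle e', f_t'\rangle$ equals the matrix entry of $f^* f$, and taking $e = e' = \chi_j$ gives $\|P_{\chi_j} f\|_2^2$. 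This is routine linear algebra but needs to be stated carefully since $f_t'$ need not themselves be basis elements. A secondary minor point is confirming the $W$-compatibility $W(f_t f_t') = (Wf_t)(Wf_t')$, which holds because $W$ simply pulls back along the deterministic map $(y,z)\mapsto x$ and hence is a ring homomorphism on the relevant function spaces.
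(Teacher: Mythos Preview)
Your proposal is correct and follows essentially the same approach as the paper: write $F=Wf$, use orthonormality of the SVD components $\{f_t\}$ to eliminate cross terms, and compare a ``modest'' quantity to ${\sf var}_{\Sigma_{{\sf modest}}}$ via the fact that atoms of $\mu_x$ are bounded below. The only organizational difference is that the paper computes $I_{j,{\sf modest}}[f]$ directly from its definition as an expectation over the modest Markov chain (getting $\sum_t \psi_t^2\,\E_{a,b}[|f_t'(a)-f_t'(b)|^2]$ and then bounding this by ${\sf var}_{\Sigma_{{\sf modest}}}(f_t')$), whereas you pass through Fact~\ref{fact:modest_inf_formula} and the identity $\sum_t \psi_t^2|\widehat{f_t'}(\chi_j)|^2=\sum_{\chi:\chi|_j=\chi_j}|\widehat f(\chi)|^2$; both routes are equivalent and equally short. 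One tiny imprecision: the restrictions of $B_{{\sf modest}}$ to $\Sigma_{{\sf modest}}$ need not be \emph{orthogonal} there, only a basis of the mean-zero functions, but this is all you need for the norm comparison.
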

	\begin{proof}
        Write $F = W f$ for $f\colon\Sigma^n\to\mathbb{C}$, so that from the SVD decomposition of $F$ we get a similar SVD decomposition for
        $f$ with $f_t, f_t'$ satisfying $F_t = Wf_t$, $F_t' = Wf_t'$. Consider $I_{j, {\sf modest}}[F] = I_{j, {\sf modest}}[f]$,
        and consider the distribution over $(a,b)$ where $a\sim \mu_x$, and $b\sim {\sf Modest}~a$. Then by definition
		\begin{align*}
			I_{j, {\sf modest}}[f]
			&= \Expect{x\sim\mu_x^{\otimes n-1}, a, b}{\card{f(x_I,a) - f(x_I,b)}^2}\\
			&= \Expect{x\sim\mu_x^{\otimes n-1}, a, b}{\card{\sum\limits_{t}\psi_t f_t(x_I)(f_t'(a) - f_t'(b))}^2}\\
			&= \sum\limits_{t_1,t_2} \psi_{t_1}\psi_{t_2} \inner{f_{t_1}}{f_{t_2}} \Expect{a, b}{(f_{t_1}'(a) - f_{t_1}'(b))\overline{(f_{t_2}'(a) - f_{t_2}'(b))}}.
		\end{align*}
		For $t_1\neq t_2$ we have $\inner{f_{t_1}}{f_{t_2}} = 0$, so the last sum is equal to
		\[
		\sum\limits_{t}\psi_t^2 \Expect{a, b}{\card{f_{t}'(a) - f_{t}'(b)}^2} \lll_{\alpha,m}
		\sum\limits_{t}\psi_t^2 {\sf var}_{\Sigma_{{\sf modest}}}(f_t')
        \qedhere.
		\]
	\end{proof}
\subsection{Proof of Lemma~\ref{lem:delta_inductive}}
We are going to have the following hierarchy of parameters in this section (we recall that $A$ is in Lemma~\ref{lem:delta_inductive}):
\begin{align}\label{eq:hier2}
    &0 \ll R_4^{-1}\ll R_3^{-1}\ll R_2^{-1}\ll R_1^{-1}\ll c \ll \alpha, m^{-1}\leq 1,\notag\\
    &0 < \eta = \frac{1}{A^{R_4}}< \eps = \frac{1}{A^{R_3}}< \zeta = \frac{1}{A^{R_2}}< \tau = \frac{1}{A^{R_1}}\leq 1.
\end{align}
Fix $n$ and $d'$ as in Lemma~\ref{lem:delta_inductive}, and let $F\in\mathcal{F}_{n,d'}$ and $g,h$ be functions of $2$-norm $1$
so that
\[
\delta_{n,d'} = \Expect{(x,y,z)\sim \mu^{\otimes n}}{F(x)g(y)h(z)}.
\]
As $I_{{\sf modest}}[F]\geq 2d'$ by Fact~\ref{fact:modest_inf_formula}, it follows that there is $j\in[n]$ such that
\begin{equation}\label{eq:choose_j_in_delta}
I_{j,{\sf modest}}[F]\geq \frac{I_{{\sf modest}}[F]}{n}\geq \frac{2d'}{n}\geq \frac{1}{A},
\end{equation}
and we fix such $j$ henceforth. We choose the partition $I = [n]\setminus\{j\}$ and $J = \{j\}$, and then use the SVD decompositions
for $g$ and $h$ from Claim~\ref{claim:svd_effective_nonhomo_g} and for $F$ as in Claim~\ref{claim:svd_effective_nonhomo}
to write
\[
g(y) = \sum\limits_{r\in R}\kappa_r g_r(y_I)g_r'(y_J),
\qquad
h(z) = \sum\limits_{s\in S}\rho_s h_s(y_I)h_s'(z_J),
\qquad
F(y,z) = \sum\limits_{t\in T}\psi_t F_t(y_I,z_J) F_t'(y_I, z_J).
\]
Thus, using the notations of the previous section we have that
\begin{equation}\label{eq:delta_1}
\delta_{n,d'} = \sum\limits_{t\in T, r\in R, s\in S}\psi_t \kappa_r \rho_s \widehat{F}_t(r,s)\widehat{F}_t'(r,s).
\end{equation}

\subsubsection{Separating Out Coefficients Bounded Away From $0$ and Coefficients Close to $0$}
The role of the parameters $\eta$ and $\tau$ above will be that none of the coefficients
$\kappa_r, \rho_s$ and $\psi_t$ will be in the interval $[\eta,\tau)$. Formally, we show
\begin{claim}\label{claim:separate_out}
    We may find parameters as in~\eqref{eq:hier2} such that
    none of the coefficients $\kappa_r, \rho_s$ and $\psi_t$ are in the interval $[\eta,\tau)$.
\end{claim}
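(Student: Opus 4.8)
The statement (Claim~\ref{claim:separate_out}) asks us to choose the scale parameters $\tau = A^{-R_1}$, $\eta = A^{-R_4}$ (with $R_1 \ll R_2 \ll R_3 \ll R_4$) so that none of the singular values $\kappa_r$, $\rho_s$, $\psi_t$ lands in the ``forbidden band'' $[\eta, \tau)$. This is a pigeonhole argument over a nested family of candidate bands, exactly analogous to the ``free interval'' argument used in Claim~\ref{claim:relaxed4} (where intervals $I_j = [\tau^{3(j+1)}, \tau^{3j})$ were searched for one avoiding the finitely many values in the images of $f$ and $h$). First I would observe that the three SVD decompositions (Claims~\ref{claim:svd_effective_nonhomo_g} and~\ref{claim:svd_effective_nonhomo}) each produce an orthonormal family, so the coefficient vectors satisfy $\sum_r \kappa_r^2 = \sum_s \rho_s^2 = \sum_t \psi_t^2 = 1$; in particular, for any threshold $\theta > 0$, the number of indices with coefficient $\geq \theta$ is at most $1/\theta^2$. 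Since $\card{R}, \card{S}, \card{T} \leq m^{n}$ in principle but — more usefully — the number of coefficients exceeding any fixed $\theta$ is at most $\theta^{-2}$, only boundedly many coefficients matter once we fix a coarse scale.

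Next I would set up a chain of candidate band-pairs. Fix the gap structure $\tau = A^{-R_1}$ and $\eta = A^{-R_4}$, and consider instead a family of $K$ nested sub-bands $B_\ell = [A^{-a_{\ell+1}}, A^{-a_\ell})$ for a rapidly increasing sequence of exponents $a_0 < a_1 < \cdots < a_K$ lying between $R_1$ and $R_4$, where $K$ is chosen larger than the total number of ``large'' coefficients we need to worry about. Each coefficient $\kappa_r$, $\rho_s$, $\psi_t$ that is at least $A^{-a_K}$ (the only ones that can possibly fall into any $B_\ell$ with $\ell < K$, since coefficients below $A^{-a_K} < \eta$ are irrelevant to the band $[\eta,\tau)$ once we re-center) falls into at most one $B_\ell$; by the $\ell_2$ bound there are at most $3 A^{2 a_K}$ such coefficients. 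Taking $K > 3A^{2a_K}$ forces some band $B_{\ell^\star}$ to contain no coefficient at all. I would then \emph{define} $\tau := A^{-a_{\ell^\star}}$ and $\eta := A^{-a_{\ell^\star+1}}$; this is a legitimate choice consistent with~\eqref{eq:hier2} provided the sequence $a_\ell$ was chosen so that consecutive ratios $a_{\ell+1}/a_\ell$ and the surrounding constants $R_2, R_3$ (which only need to be sandwiched polynomially between $a_{\ell^\star}$ and $a_{\ell^\star+1}$) can be slotted in — which is arranged by making the $a_\ell$'s grow geometrically with a large enough ratio, leaving polynomial room for $\zeta = A^{-R_2}$ and $\eps = A^{-R_3}$ in between.

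The one subtlety — and the step I expect to need the most care — is matching this up with the \emph{fixed} hierarchy~\eqref{eq:hier2}, in which $R_1, R_2, R_3, R_4$ are absolute constants (depending only on $m, \alpha$) chosen \emph{before} seeing $F, g, h$, whereas the pigeonhole above seems to choose $\tau, \eta$ \emph{after}. The resolution is that the number of ``large'' coefficients is bounded by a function of $A$ alone (namely $3A^{2a_K}$, hence $K$ and all the $a_\ell$ can be taken as fixed functions of $A$, i.e. of the ratio $n/d'$), so one can set up the whole scaffold of band exponents up front; the role of $R_1, R_4$ is just to be the outermost exponents $a_0$ and $a_K$ of this predetermined scaffold, and $R_2, R_3$ are two intermediate exponents chosen in the largest gap so that whichever band turns out to be free, $\zeta$ and $\eps$ still lie strictly between the chosen $\eta$ and $\tau$. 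I would spell out that the total count of coefficients with value $\geq \eta$ across all three decompositions is at most $3\eta^{-2} = 3A^{2R_4}$, so taking the scaffold to have, say, $R_4 = 10 R_1$ and $2(R_4 - R_1)/1 \gg 3 A^{2R_4}$... — actually the cleaner route is: there are at most $3\tau^{-2}$ coefficients of size $\geq \tau$, hence choosing the band width (ratio $\tau/\eta$) to allow more than $3\tau^{-2}$ disjoint sub-bands inside $[\,\text{something},\tau)$ gives a free one; since $\tau^{-2}$ is polynomial in $A$ and we have exponential room (band ratios are powers of $A$), this is easily satisfied, and I would close the argument by noting the resulting $\eta,\tau,\zeta,\eps$ obey~\eqref{eq:hier2}.
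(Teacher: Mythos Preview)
Your core idea --- a pigeonhole over disjoint candidate bands --- is exactly the paper's approach, but your counting step has a real gap. You bound the number of coefficients of size $\geq \eta$ by $3\eta^{-2}$ via the $\ell_2$ normalization; then you need more than $3\eta^{-2} = 3A^{2R_4}$ disjoint sub-bands inside $[A^{-R_4}, A^{-R_1})$. But the number of such sub-bands (with exponents between $R_1$ and $R_4$) is only of order $R_4 - R_1$, whereas $3A^{2R_4}$ is exponential in $R_4$ since $A \geq 2$. So the pigeonhole does not close, and your attempted rescue at the end (``at most $3\tau^{-2}$ coefficients of size $\geq \tau$'') counts the wrong coefficients: the ones landing in $[\eta,\tau)$ are \emph{below} $\tau$, not above it.

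The fix --- and the paper's actual argument --- is much simpler: because $\card{J}=1$, the functions $g_r'$, $h_s'$ live in $L_2(\Gamma)$, $L_2(\Phi)$, and the $F_t'$ (being constant on connected components) live in a space of dimension $\card{\Sigma}$. Each of these has dimension at most $m$, so there are at most $3m$ coefficients \emph{in total}, independent of $n$ and of any threshold. The paper then just iterates: start with some hierarchy~\eqref{eq:hier2}; if a coefficient lands in $[\eta,\tau)$, shift to a new hierarchy with $R_1' = R_4$ (so the new band is disjoint from the old). After at most $3m$ shifts some band is free, and since $3m$ depends only on $m$, the final exponents $R_i$ are still bounded by constants depending only on $m,\alpha$. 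Your worry about choosing parameters ``before seeing $F,g,h$'' is therefore a non-issue.
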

\begin{proof}
  We start with some parameters as in~\eqref{eq:hier2}.
  As long as there is a coefficient in the interval $[\eta,\tau)$,
  we take a new set of $R$'s by taking $R_1' = R_4$, $R_4'\gg R_3'\gg R_2'\gg R_1'$ and take the collection $\{R_i'\}_{i=1}^{4}$.
  Note that as $R_1' > R_4$ the intervals we consider at each step are disjoint, and therefore
  after at most $3m$ iterations we will get that none of the coefficients lie in the interval $[\eta,\tau)$, and
  we are done.
\end{proof}
We assume henceforth that $\eta\ll \tau$ that none of $\kappa_r, \rho_s, \psi_t$ are in $[\eta,\tau)$.
Define
\[
T' = \sett{t\in T}{\psi_t\geq \tau},
\qquad
R' = \sett{r\in R}{\kappa_r\geq \tau},
\qquad
S' = \sett{s\in S}{\rho_s\geq \tau}.
\]
Using~\eqref{eq:delta_1} we write
\begin{equation}\label{eq:delta_2}
\delta_{n,d'}
=
\underbrace{\sum\limits_{t\in T', r\in R', s\in S'}\psi_t \kappa_r \rho_s \widehat{F}_t(r,s)\widehat{F}_t'(r,s)}_{(\rom{1})}
+
\underbrace{\sum\limits_{\substack{t\in T, r\in R, s\in S\\t\not\in T'\text{ or }r\not\in R'\text{ or }s\not\in S'}}\psi_t \kappa_r \rho_s \widehat{F}_t(r,s)\widehat{F}_t'(r,s)}_{(\rom{2})}.
\end{equation}

The majority of our effort will go into bounding the contribution of $(\rom{1})$ (which we defer to the rest of this section),
and we first bound the contribution of $(\rom{2})$ by the following claim.
\begin{claim}\label{claim:bound_err_term_gamma}
    It holds that $\card{(\rom{2})}\lll_{m} \eta \delta_{n-1,d'-1}$.
\end{claim}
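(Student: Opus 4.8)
The plan is to bound $(\rom{2})$ by a crude term-by-term estimate, using that only $O_m(1)$ triples really matter and that whichever of $\psi_t,\kappa_r,\rho_s$ forces a triple $(t,r,s)$ into $(\rom{2})$ is tiny. First I would collect three uniform facts. (i) Since the SVD decomposition of Claim~\ref{claim:svd_effective_nonhomo} produces functions $F_t$ that are constant on connected components and have effective non-embedding degree at least $d'-1$, we have $F_t\in\mathcal{F}_{n-1,d'-1}$ (after identifying $I\cong[n-1]$), and hence, taking $g=g_r$, $h=h_s$ in the supremum defining $\delta_{n-1,d'-1}$,
\[
\card{\widehat{F}_t(r,s)}
=\card{\Expect{(x,y,z)\sim\mu^{\otimes(n-1)}}{F_t(y,z)g_r(y)h_s(z)}}
\leq \delta_{n-1,d'-1}\norm{F_t}_2\norm{g_r}_2\norm{h_s}_2
=\delta_{n-1,d'-1}
\]
for every $t\in T$, $r\in R$, $s\in S$. (ii) Since $\mu_{y,z}$ is uniform, hence a product measure, and $\{g_r'\}_r$, $\{h_s'\}_s$ are orthonormal, the products $\{g_r'h_s'\}_{r,s}$ form an orthonormal set in $L_2(\Gamma^J\times\Phi^J;\mu_{y,z})$, so Bessel's inequality gives $\sum_{r,s}\card{\widehat{F}_t'(r,s)}^2\leq\norm{F_t'}_2^2=1$ for each $t$. (iii) The index sets satisfy $\card{R},\card{S},\card{T}=O_m(1)$, because each one has size at most the dimension of the relevant $L_2$-space over the single coordinate in $J$.

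Next I would invoke Claim~\ref{claim:separate_out}: no coefficient lies in $[\eta,\tau)$, so $t\notin T'\Rightarrow\psi_t<\eta$, $r\notin R'\Rightarrow\kappa_r<\eta$, $s\notin S'\Rightarrow\rho_s<\eta$. Split $(\rom{2})$ into the three disjoint pieces $(\mathrm a)$: $t\notin T'$; $(\mathrm b)$: $t\in T'$, $r\notin R'$; $(\mathrm c)$: $t\in T'$, $r\in R'$, $s\notin S'$. For $(\mathrm a)$, bound $\card{\widehat{F}_t(r,s)}$ by $\delta_{n-1,d'-1}$ and apply Cauchy--Schwarz together with (ii) to the $(r,s)$-sum:
\[
\card{(\mathrm a)}
\leq \delta_{n-1,d'-1}\sum_{t\notin T'}\psi_t\sum_{r,s}\kappa_r\rho_s\card{\widehat{F}_t'(r,s)}
\leq \delta_{n-1,d'-1}\Big(\sum_{t\notin T'}\psi_t\Big)\sqrt{\Big(\sum_r\kappa_r^2\Big)\Big(\sum_s\rho_s^2\Big)}
\leq m\eta\,\delta_{n-1,d'-1},
\]
using $\sum_r\kappa_r^2=\sum_s\rho_s^2=1$ and $\sum_{t\notin T'}\psi_t<\card{T}\eta=O_m(\eta)$. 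For $(\mathrm b)$ the same manipulation gives $\card{(\mathrm b)}\leq\delta_{n-1,d'-1}\big(\sum_{t\in T'}\psi_t\big)\sqrt{\sum_{r\notin R'}\kappa_r^2}$; here $\sum_{t\in T'}\psi_t\leq\sqrt{\card T}\,\big(\sum_t\psi_t^2\big)^{1/2}=O_m(1)$ and $\sum_{r\notin R'}\kappa_r^2<\card R\,\eta^2=O_m(\eta^2)$, so $\card{(\mathrm b)}=O_m(\eta)\,\delta_{n-1,d'-1}$; piece $(\mathrm c)$ is identical with $R,S$ swapped. Summing the three bounds yields $\card{(\rom{2})}\lll_m\eta\,\delta_{n-1,d'-1}$.

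I do not expect a genuine obstacle here: the whole estimate is Cauchy--Schwarz and Bessel against the crude cardinality bounds of (iii). The only points deserving care are that $F_t\in\mathcal{F}_{n-1,d'-1}$ — which is exactly what Claim~\ref{claim:svd_effective_nonhomo} provides, so that $F_t$ is an admissible test function for $\delta_{n-1,d'-1}$ — and the orthonormality of the products $\{g_r'h_s'\}_{r,s}$, which uses that $\mu_{y,z}$ is a product measure; both are already available in the present setting.
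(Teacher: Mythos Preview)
Your proof is correct and follows essentially the same approach as the paper: use that at least one of $\psi_t,\kappa_r,\rho_s$ is below $\eta$, that $\card{\widehat{F}_t(r,s)}\leq\delta_{n-1,d'-1}$, and that there are only $O_m(1)$ terms. The paper's argument is slightly more direct---it simply bounds each summand by $\eta\cdot 1\cdot 1\cdot\delta_{n-1,d'-1}\cdot 1$ (using $\card{\widehat{F}_t'(r,s)}\leq 1$ by Cauchy--Schwarz) and multiplies by the at most $m^3$ summands, without your Cauchy--Schwarz/Bessel splitting into cases $(\mathrm a),(\mathrm b),(\mathrm c)$.
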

\begin{proof}
  By the triangle inequality, it suffices to upper bound each summand. For each $r,s,t$ in the sum we have that
  $\card{\psi_t \kappa_r \rho_s}\leq \eta$, as at least one of them is at most $\eta$ in absolute value, and the
  other two are at most $1$. Also, $\card{\widehat{F}_t(r,s)}\leq \delta_{n-1,d'-1}$ and
  $\card{\widehat{F}_t'(r,s)}\leq 1$, and the claim follows since the number of terms in the sum
  is at most $m^3$.
\end{proof}

\subsubsection{A Naive Cauchy-Schwarz Bound on $(\rom{1})$}
We now present a naive Cauchy-Schwarz based argument showing that $\card{(\rom{1})}\leq \delta_{n-1,d'-1}$ which can be used to recover the trivial bound
$\delta_{n,d'}\leq \delta_{n-1,d'-1}$. The upshot of this argument is that by inspecting near equality cases, the argument will allow us to say that unless
near equalities in the pursuing applications of Cauchy-Schwarz argument hold, we will successfully have shown that $\delta_{n,d'}\leq (1-\eps)\delta_{n-1,d'-1}$.
Hence, we will be able to assume in the rest of the argument that all of the Cauchy-Schwarz applications were
nearly tight.
To be more precise, by Cauchy-Schwarz
\begin{align*}
\card{(\rom{1})}
&\leq
\sqrt{\sum\limits_{t\in T', r\in R', s\in S'}\kappa_r^2\rho_s^2\card{\widehat{F}_t(r,s)}^2}
\sqrt{\sum\limits_{t\in T', r\in R', s\in S'}\psi_t^2\card{\widehat{F}_t'(r,s)}^2}\\
&=\sqrt{\sum\limits_{r\in R', s\in S'}\kappa_r^2\rho_s^2\sum\limits_{t\in T'}\card{\widehat{F}_t(r,s)}^2}
\sqrt{\sum\limits_{t\in T'}\psi_t^2\sum\limits_{r\in R', s\in S'}\card{\widehat{F}_t'(r,s)}^2}.
\end{align*}
We note that
$\sum\limits_{t\in T'}\psi_t^2\sum\limits_{r\in R', s\in S'}\card{\widehat{F}_t'(r,s)}^2\leq \sum\limits_{t\in T'}\psi_t^2$
as $\sum\limits_{r\in R', s\in S'}\card{\widehat{F}_t'(r,s)}^2\leq 1$, and that
\[
\sum\limits_{t\in T'}\card{\widehat{F}_t(r,s)}^2
=\Expect{(x,y,z)\sim \mu^{\otimes(n-1)}}{\tilde{F}(x)g_r(y)h_s(z)}^2
\]
where $\tilde{F}(x) = \frac{\sum\limits_{t\in T'}\overline{\widehat{F}_t(r,s)}F_t}{\sqrt{\sum\limits_{t\in T'}\card{\widehat{F}_t(r,s)}^2}}$,
which by the definition is at most $\delta_{n-1,d'-1}^2$. Combining, we get that
\[
\card{(\rom{1})}
\leq
\sqrt{\sum\limits_{r\in R', s\in S'}\kappa_r^2\rho_s^2\delta_{n-1,d_1'-1}^2}
\sqrt{\sum\limits_{t\in T'}\psi_t^2}
\leq \delta_{n-1,d'-1}.
\]
Inspecting the proof, we see that if we had that
\begin{align*}
&\card{\sum\limits_{t\in T', r\in R', s\in S'}\psi_t \kappa_r \rho_s \widehat{F}_t(r,s)\widehat{F}_t'(r,s)}\\
&\qquad\leq
(1-\zeta)
\sqrt{\sum\limits_{t\in T', r\in R', s\in S'}\kappa_r^2\rho_s^2\card{\widehat{F}_t(r,s)}^2}
\sqrt{\sum\limits_{t\in T', r\in R', s\in S'}\psi_t^2\card{\widehat{F}_t'(r,s)}^2}
\end{align*}
then we would get that $\card{(\rom{1})}\leq (1-\zeta)\gamma_{n-1,d_1'-1}$ and combining with Claim~\ref{claim:bound_err_term_gamma} we get that
$\delta_{n,d_1'}\leq (1-\eps)\delta_{n-1,d'-1}$ and the proof would be concluded. We henceforth assume that
\begin{align}\label{eq:assume_delta_1}
&\card{\sum\limits_{t\in T', r\in R', s\in S'}\psi_t \kappa_r \rho_s \widehat{F}_t(r,s)\widehat{F}_t'(r,s)}\notag\\
&\qquad>
(1-\zeta)
\sqrt{\sum\limits_{t\in T', r\in R', s\in S'}\kappa_r^2\rho_s^2\card{\widehat{F}_t(r,s)}^2}
\sqrt{\sum\limits_{t\in T', r\in R', s\in S'}\psi_t^2\card{\widehat{F}_t'(r,s)}^2}.
\end{align}
Also, we see that if we had that $\sum\limits_{t\in T'}\psi_t^2\sum\limits_{r\in R', s\in S'}\card{\widehat{F}_t'(r,s)}^2\leq (1-\zeta)\sum\limits_{t\in T'}\psi_t^2$,
then we would get that $\card{(\rom{1})}\leq \sqrt{1-\zeta}\delta_{n-1,d'-1}$ and combining with Claim~\ref{claim:bound_err_term_gamma} we get that
$\delta_{n,d'}\leq (1-\eps)\delta_{n-1,d'-1}$ and the proof would be concluded. We henceforth assume that
\begin{equation}\label{eq:assume_delta_2}
  \sum\limits_{t\in T'}\psi_t^2\sum\limits_{r\in R', s\in S'}\card{\widehat{F}_t'(r,s)}^2> (1-\zeta)\sum\limits_{t\in T'}\psi_t^2.
\end{equation}
Lastly, if
$\sum\limits_{t\in T', r\in R', s\in S'}\kappa_r^2\rho_s^2\card{\widehat{F}_t(r,s)}^2\leq (1-\zeta)\delta_{n-1,d'-1}^2\sum\limits_{r\in R', s\in S'}\kappa_r^2\rho_s^2$
then once again we will be able to conclude that $\delta_{n,d_1'}\leq (1-\eps)\delta_{n-1,d'-1}$, and thus henceforth we have that
\begin{equation}\label{eq:assume_delta_3}
  \sum\limits_{t\in T', r\in R', s\in S'}\kappa_r^2\rho_s^2\card{\widehat{F}_t(r,s)}^2 > (1-\zeta)\delta_{n-1,d'-1}^2\sum\limits_{r\in R', s\in S'}\kappa_r^2\rho_s^2.
\end{equation}

\subsubsection{The Random Pertubation Argument}
Define $M_1 = \sum\limits_{r\in R'} \kappa_r^2$ and $M_2 = \sum\limits_{s\in S'} \rho_s^2$. Below, we will consider complex-valued variables
$\pi_{r}$ and $\theta_s$ satisfying that
\begin{equation}\label{eq:delta_3}
\sum\limits_{r\in R'} \card{\pi_r}^2\kappa_r^2 = M_1,
\qquad
\sum\limits_{s\in S'} \card{\theta_s}^2\rho_s^2 = M_2.
\end{equation}
Eventually, we choose $\pi_r, \theta_s$ according to a distribution over the points
satisfying these equalities. More precisely, we choose the distribution of $\pi_r$'s so that
the vector $(\kappa_r\pi_r)_{r\in R'}$ is distributed uniformly over vectors in $\mathbb{C}^{R'}$
with $2$-norm equal to $\sqrt{M_1}$; similarly, we choose the distribution of $\theta_s$ so that
the vector $(\theta_s\pi_s)_{s\in R'}$ is distributed uniformly over vectors in $\mathbb{C}^{S'}$
with $2$-norm equal to $\sqrt{M_2}$.

The point of these constraints is so that we can define the functions
\[
\tilde{g}(y) = \sum\limits_{r\in R'}\kappa_r\pi_r g_r(y),
\qquad
\tilde{h}(z) = \sum\limits_{s\in S'}\rho_s\theta_s h_s(z),
\]
and have that their $2$-norms squared are equal to $M_1$ and $M_2$ respectively.

The functions $\tilde{g}$, and $\tilde{h}$ allow us to consider the lower order, $n-1$ dimensional problems
corresponding to $\tilde{F}$ and $\tilde{g}$, $\tilde{h}$ where
$\tilde{F} = \frac{\sum\limits_{t}\inner{F_t}{\tilde{g}\tilde{h}} F_t}{\sqrt{\sum\limits_{t}\card{\inner{F_t}{\tilde{g}\tilde{h}}}^2}}$.
Indeed, we define the function
\[
p(\{\pi_r\}_{r\in R'}, \{\theta\}_{s\in S'})
=\card{\Expect{(x,y,z)\sim \mu^{\otimes(n-1)}}{\tilde{F}(x)\tilde{g}(y)\tilde{h}(z)}}^2
=\sum\limits_{t}\card{\inner{F_t}{\overline{\tilde{g}\tilde{h}}}}^2.
\]
By the first expression for $p$ and definition, we know that
$\card{p(\{\pi_r\}_{r\in R'}, \{\theta\}_{s\in S'})}\leq \delta_{n-1,d'-1}^2$.
In our argument we will analyze the expectation of $p(\cdot)$ and show  it is close to $\delta_{n-1,d'-1}^2$,
from which it follows that the value of $p(\cdot)$ is roughly constant. On the other end, we will analyze
the variance of $p(\cdot)$ and lower bound it, and combining these two facts will finish the proof. We now do each
one of these steps in detail.

First, we establish the point-wise upper bound on $p$:
\begin{claim}\label{claim:pointwise_bd_p}
  For all inputs satisfying~\eqref{eq:delta_3} we have that
  $p(\{\pi_r\}_{r\in R'}, \{\theta\}_{s\in S'})\leq \delta_{n-1,d'-1}^2$.
\end{claim}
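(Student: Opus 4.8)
The plan is to recognize the right-hand quantity $p$ as (the square of) a normalized three-wise correlation of functions that are legal competitors in the supremum defining $\delta_{n-1,d'-1}$, and then simply invoke that definition. First I would record the two norm computations. Since $\{g_r\}_{r\in R}$ is orthonormal (Claim~\ref{claim:svd_effective_nonhomo_g}) and $\tilde g = \sum_{r\in R'}\kappa_r\pi_r g_r$, we get $\norm{\tilde g}_2^2 = \sum_{r\in R'}\card{\pi_r}^2\kappa_r^2 = M_1$ by~\eqref{eq:delta_3}; likewise $\norm{\tilde h}_2^2 = M_2$ using orthonormality of $\{h_s\}$. Moreover $M_1 = \sum_{r\in R'}\kappa_r^2 \leq \sum_{r\in R}\kappa_r^2 = 1$ and similarly $M_2 \leq 1$.

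Next I would check that $\tilde F$ is an admissible test function for $\delta_{n-1,d'-1}$. If $\sum_t\card{\inner{F_t}{\overline{\tilde g\tilde h}}}^2 = 0$ then $p = 0$ and the bound is trivial, so assume it is positive; then $\tilde F$ is well-defined and, since $\{F_t\}_{t\in T}$ is orthonormal, $\norm{\tilde F}_2 = 1$. Each $F_t$ is constant on connected components and has effective non-embedding degree at least $d'-1$ (Claim~\ref{claim:svd_effective_nonhomo}, items 3,4), and both of these properties are preserved under linear combinations: ``constant on connected components'' is the linear condition of lying in the image of $W$ (Lemma~\ref{lem:constant_on_ccs_equiv}), and ``effective non-embedding degree at least $d'-1$'' says that every monomial $W\chi$ with nonzero coefficient has $\effnon(\chi)\geq d'-1$, a condition closed under addition. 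Hence $\tilde F\in\mathcal{F}_{n-1,d'-1}$.

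Finally I would put these together. By the definition of $\delta_{n-1,d'-1}[\mu]$ applied to the triple $(\tilde F,\tilde g,\tilde h)$,
\[
\card{\Expect{(x,y,z)\sim\mu^{\otimes(n-1)}}{\tilde F(x)\tilde g(y)\tilde h(z)}}
\leq \delta_{n-1,d'-1}\norm{\tilde F}_2\norm{\tilde g}_2\norm{\tilde h}_2
= \delta_{n-1,d'-1}\sqrt{M_1 M_2}.
\]
Squaring and using $M_1 M_2 \leq 1$ gives $p = \card{\Expect{(x,y,z)\sim\mu^{\otimes(n-1)}}{\tilde F\tilde g\tilde h}}^2 \leq \delta_{n-1,d'-1}^2$, as claimed.

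I do not anticipate a genuine obstacle here: the only points needing a word of care are the degenerate zero-denominator case (handled trivially) and the verification that $\mathcal{F}_{n-1,d'-1}$ is a linear subspace, both of which are immediate from the definitions. The real work of this section lies instead in the companion lower bound on $\E[p]$ and the variance estimate for $p$, not in this pointwise inequality.
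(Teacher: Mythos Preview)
Your proposal is correct and is exactly the argument the paper has in mind; the paper's own proof is the single sentence ``immediate by definition of $\delta_{n-1,d'-1}$,'' and you have simply unpacked that sentence, verifying that $\tilde F\in\mathcal{F}_{n-1,d'-1}$, that $\norm{\tilde g}_2^2=M_1\leq 1$, $\norm{\tilde h}_2^2=M_2\leq 1$, and then applying the supremum bound. There is nothing to add.
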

\begin{proof}
  This is immediate by definition of $\delta_{n-1,d'-1}$.
\end{proof}

To compute the expectation and variance of $p$ it will be useful for us to expand it out. We have:
\begin{align}
  p(\{\pi_r\}_{r\in R'}, \{\theta\}_{s\in S'})
  &= \sum\limits_{t}\card{\inner{F_t}{\overline{\tilde{g}\tilde{h}}}}^2\notag\\
  &= \sum\limits_{t}\card{\sum\limits_{r\in R',s\in S'}\kappa_r\rho_s\pi_r\theta_s\widehat{F}_t(r,s)}^2\notag\\
  &= \sum\limits_{t}\sum\limits_{r,r'\in R',s,s'\in S'}\kappa_r\overline{\kappa_{r'}}\rho_s\overline{\rho_{s'}}\pi_r\pi_{r'}\theta_s\theta_{s'}\widehat{F}_t(r,s)\overline{\widehat{F}_t(r',s')}\notag\\
  &= \sum\limits_{r,r'\in R',s,s'\in S'}\kappa_r\kappa_{r'}\rho_s\overline{\rho_{s'}}\pi_r\overline{\pi_{r'}}\theta_s\theta_{s'}\inner{V_{r,s}}{V_{r',s'}},\label{eq:delta_4}
\end{align}
where the vector $V_{r,s}\in\mathbb{C}^{\card{T'}}$ is defined as $V_{r,s}(t) = \widehat{F}_t(r,s)$. We are also going to need the following claim that gives us bounds on
the norms of the vectors $V_{r,s}$.
\begin{claim}\label{claim:bound_norm_V}
    For all $r\in R'$ and $s\in S'$,
    $(1-\sqrt{\zeta})\delta_{n-1,d'-1}^2\leq \norm{V_{r,s}}_2^2\leq \delta_{n-1,d'-1}^2$.
\end{claim}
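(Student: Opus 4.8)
The plan is to prove the two inequalities separately: the upper bound is a direct consequence of the definition of $\delta_{n-1,d'-1}$ via an SVD-normalization trick, while the lower bound is extracted from the near-tightness assumption~\eqref{eq:assume_delta_3} together with the fact that the surviving indices $r\in R'$, $s\in S'$ have coefficients bounded below by $\tau$.

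For the upper bound, fix $r\in R'$ and $s\in S'$ and write $N_{r,s}=\norm{V_{r,s}}_2^2=\sum_{t\in T'}\card{\widehat{F}_t(r,s)}^2$. If $N_{r,s}=0$ there is nothing to prove, and otherwise I would introduce the normalized function $\tilde F_{r,s}=N_{r,s}^{-1/2}\sum_{t\in T'}\overline{\widehat{F}_t(r,s)}F_t$. Since $\set{F_t}_{t\in T}$ is orthonormal this has $\norm{\tilde F_{r,s}}_2=1$, and since each $F_t$ ($t\in T$) is constant on connected components and has effective non-embedding degree at least $d'-1$ by Claim~\ref{claim:svd_effective_nonhomo} — and both properties are preserved under linear combinations — we get $\tilde F_{r,s}\in\mathcal{F}_{n-1,d'-1}$. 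Unwinding $\widehat{F}_t(r,s)=\Expect{(x,y,z)\sim\mu^{\otimes(n-1)}}{F_t(y,z)g_r(y)h_s(z)}$ one finds $\Expect{(x,y,z)\sim\mu^{\otimes(n-1)}}{\tilde F_{r,s}(y,z)g_r(y)h_s(z)}=\sqrt{N_{r,s}}$, which by the definition of $\delta_{n-1,d'-1}$ and $\norm{g_r}_2=\norm{h_s}_2=1$ is at most $\delta_{n-1,d'-1}$; squaring gives $N_{r,s}\le\delta_{n-1,d'-1}^2$.

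For the lower bound I would argue by contradiction. If $\delta_{n-1,d'-1}=0$ there is nothing to show, so assume $\delta_{n-1,d'-1}>0$ and suppose some $r_0\in R'$, $s_0\in S'$ has $N_{r_0,s_0}<(1-\sqrt{\zeta})\delta_{n-1,d'-1}^2$. Using the pointwise bound $N_{r,s}\le\delta_{n-1,d'-1}^2$ on all other terms and the hypothesis on $(r_0,s_0)$, the quantity $\sum_{r\in R',s\in S'}\kappa_r^2\rho_s^2 N_{r,s}$ — which is exactly the left-hand side of~\eqref{eq:assume_delta_3} — is at most $\delta_{n-1,d'-1}^2\bigl(\sum_{r,s}\kappa_r^2\rho_s^2-\sqrt{\zeta}\,\kappa_{r_0}^2\rho_{s_0}^2\bigr)$. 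Comparing with~\eqref{eq:assume_delta_3}, which lower bounds the same quantity by $(1-\zeta)\delta_{n-1,d'-1}^2\sum_{r,s}\kappa_r^2\rho_s^2$, and using $\sum_{r,s}\kappa_r^2\rho_s^2\le 1$, I would conclude $\sqrt{\zeta}\,\kappa_{r_0}^2\rho_{s_0}^2<\zeta$, i.e. $\kappa_{r_0}^2\rho_{s_0}^2<\sqrt{\zeta}$. But $r_0\in R'$ and $s_0\in S'$ force $\kappa_{r_0},\rho_{s_0}\ge\tau$, so $\kappa_{r_0}^2\rho_{s_0}^2\ge\tau^4$, giving $\tau^8<\zeta$, which contradicts the parameter hierarchy~\eqref{eq:hier2} (where $\zeta=A^{-R_2}$ and $\tau=A^{-R_1}$ with $R_2\gg R_1$ and $A\ge 2$).

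I do not expect a genuine obstacle here: the argument is essentially bookkeeping. The only points that need care are checking that "effective non-embedding degree at least $d'-1$" and "constant on connected components" are preserved by the linear combination defining $\tilde F_{r,s}$ (the latter via the span characterization in Lemma~\ref{lem:constant_on_ccs_equiv}), and confirming that the hierarchy~\eqref{eq:hier2} is chosen so that $\tau^8<\zeta$ is impossible. The substantive input is the near-equality assumption~\eqref{eq:assume_delta_3}, whose validity in the surrounding argument was arranged by the observation that its failure would already force $\delta_{n,d'}\le(1-\eps)\delta_{n-1,d'-1}$.
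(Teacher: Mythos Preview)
Your proof is correct and follows essentially the same approach as the paper: the upper bound via the normalized combination $\tilde F_{r,s}\in\mathcal{F}_{n-1,d'-1}$ is identical, and for the lower bound you recast the paper's direct inequality $\delta_{n-1,d'-1}^2-\norm{V_{r,s}}_2^2\le(\zeta/\tau^4)\delta_{n-1,d'-1}^2\le\sqrt{\zeta}\,\delta_{n-1,d'-1}^2$ as a contradiction, using the same ingredients (\eqref{eq:assume_delta_3}, the pointwise upper bound, and $\kappa_r,\rho_s\ge\tau$).
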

\begin{proof}
    Note that we may write~\eqref{eq:assume_delta_3} as
    $\sum\limits_{r\in R', s\in S'}\kappa_r^2\rho_s^2\norm{V_{r,s}}_2^2\geq (1-\zeta)\delta_{n-1,d'-1}^2M_1M_2$,
    and so
    \[
        \sum\limits_{r\in R', s\in S'}\kappa_r^2\rho_s^2(\delta_{n-1,d'-1}^2 - \norm{V_{r,s}}_2^2)
        \leq \zeta \delta_{n-1,d'-1}^2.
    \]
    Thus, all of the terms on the left hand side are non-negative and it follows that for all $r\in R'$ and $s\in S'$
    \[
        \delta_{n-1,d'-1}^2 - \norm{V_{r,s}}_2^2\leq
        \frac{\zeta \delta_{n-1,d'-1}^2}{\kappa_r^2\rho_s^2}\leq \frac{\zeta \delta_{n-1,d'-1}^2}{\tau^4}
        \leq \sqrt{\zeta}\delta_{n-1,d'-1}^2,
    \]
    and re-arranging gives the lower bound. As for the upper bound, note that for all $r\in R'$ and $s\in S'$ we have that $\norm{V_{r,s}}_2^2\leq \delta_{n-1,d'-1}^2$,
    as this norm can be written as $\card{\Expect{(x,y,z)\sim \mu^{\otimes(n-1)}}{F'(x)g_r(y)h_s(z)}}^2$
    where $F'(x) = \frac{\sum\limits_{t\in T'}\widehat{F}_t(r,s)F_t}{\sqrt{\sum\limits_{t\in T'}\card{\widehat{F}_t(r,s)}^2}}$.
\end{proof}

\subsubsection{Lower Bounding the Expectation of $p$ and Upper Bounding the Variance of $p$}
The following claim bounds the expectation of $p$.
\begin{claim}\label{claim:bound_expectation}
    Consider the distribution over $\pi_r$ and $\theta_s$ as defined after~\eqref{eq:delta_3}. Then
    \[
        \Expect{}{p(\{\pi_r\}_{r\in R'}, \{\theta_s\}_{s\in S'}) }\geq \delta_{n-1,d'-1}^2(1-3\sqrt{\zeta}).
    \]
\end{claim}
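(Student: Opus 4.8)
I would prove Claim~\ref{claim:bound_expectation} by a direct second-moment computation using the explicit distribution on the perturbation variables. Recall that $(\kappa_r \pi_r)_{r \in R'}$ is uniform on the complex sphere of radius $\sqrt{M_1}$ in $\mathbb{C}^{R'}$, and $(\rho_s \theta_s)_{s \in S'}$ is uniform on the complex sphere of radius $\sqrt{M_2}$ in $\mathbb{C}^{S'}$, independently. Starting from the expansion~\eqref{eq:delta_4}, I take expectations term by term. The key distributional facts are: for a uniform unit vector $u$ on the complex sphere in $\mathbb{C}^k$, $\Expect{}{u_a \overline{u_b}} = \frac{1}{k} 1_{a=b}$, and more generally $\Expect{}{u_a \overline{u_b} u_c \overline{u_d}}$ is $O(1/k^2)$ and vanishes unless the indices pair up. Using independence of the two spheres and the scalings, I would get that the only surviving terms in $\Expect{}{p}$ at leading order are the diagonal ones $r = r'$, $s = s'$, giving
\[
\Expect{}{p} = \sum\limits_{r \in R', s \in S'}\kappa_r^2 \rho_s^2 \norm{V_{r,s}}_2^2 \cdot \frac{M_1 M_2}{M_1 M_2} \cdot (1 + \text{lower order}),
\]
where I am being schematic — the point is that the uniform-sphere expectation of $|\pi_r|^2 |\theta_s|^2$ times $\kappa_r^2 \rho_s^2$ reproduces exactly the weights appearing in~\eqref{eq:assume_delta_3}.

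**Carrying it out.** More carefully: by the sphere moment formulas, $\Expect{}{\kappa_r^2 |\pi_r|^2} = M_1/|R'|$ is not quite what I want; rather I should track $\Expect{}{(\kappa_r \pi_r)\overline{(\kappa_{r'}\pi_{r'})}} = \frac{M_1}{|R'|}1_{r=r'}$ is also not directly it. The cleanest route is to write $p = \sum_t |\langle F_t, \overline{\tilde g \tilde h}\rangle|^2$ and note $\tilde g \tilde h = \sum_{r,s}(\kappa_r\pi_r)(\rho_s\theta_s) g_r h_s$, so $\langle F_t, \overline{\tilde g \tilde h}\rangle = \sum_{r,s}(\kappa_r\pi_r)(\rho_s\theta_s)\widehat{F}_t(r,s)$. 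Then $\Expect{}{p} = \sum_t \sum_{(r,s),(r',s')} \Expect{}{(\kappa_r\pi_r)\overline{(\kappa_{r'}\pi_{r'})}}\Expect{}{(\rho_s\theta_s)\overline{(\rho_{s'}\theta_{s'})}} \widehat{F}_t(r,s)\overline{\widehat{F}_t(r',s')}$, and each two-point sphere moment is $\frac{M_1}{|R'|}1_{r=r'}$ and $\frac{M_2}{|S'|}1_{s=s'}$ respectively. This gives $\Expect{}{p} = \frac{M_1 M_2}{|R'||S'|}\sum_{r\in R', s\in S'}\sum_t |\widehat{F}_t(r,s)|^2 = \frac{M_1 M_2}{|R'||S'|}\sum_{r,s}\norm{V_{r,s}}_2^2$. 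Hmm — but this has the wrong normalization relative to the target $\delta_{n-1,d'-1}^2(1-3\sqrt\zeta)$; the resolution must be that the intended distribution weights each coordinate by $\kappa_r^2, \rho_s^2$ rather than uniformly, i.e. $\Expect{}{|\kappa_r\pi_r|^2} \propto \kappa_r^2$. Re-reading the construction, the vector $(\kappa_r\pi_r)_r$ being uniform on the sphere of radius $\sqrt{M_1}$ means $\Expect{}{|\kappa_r\pi_r|^2} = M_1/|R'|$, so I would instead want the normalization where $\pi_r$ itself is chosen so the mixed fourth moments collapse correctly; in any case, whichever precise convention the authors intend, the computation is: diagonal two-point moments survive, off-diagonal vanish, and one is left with a weighted average of $\norm{V_{r,s}}_2^2$ over $r \in R', s \in S'$. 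By Claim~\ref{claim:bound_norm_V} each such norm is at least $(1-\sqrt\zeta)\delta_{n-1,d'-1}^2$, and since the weights sum to $1$ (by~\eqref{eq:delta_3} and the sphere normalization) one obtains $\Expect{}{p} \geq (1-\sqrt\zeta)\delta_{n-1,d'-1}^2 \cdot (1 - O(\text{sphere correction}))$. Absorbing the $O(1/|R'|) = O(1/m)$-type corrections and the lower-order sphere moment corrections into a single $3\sqrt\zeta$ slack (using $\zeta = A^{-R_2}$ with $R_2$ large, hence $\sqrt\zeta$ dominates any $1/m$ or $\tau$-type loss per the hierarchy~\eqref{eq:hier2}) yields the claim.

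**Main obstacle.** The genuinely delicate point is bookkeeping the exact normalization of the perturbation distribution so that the surviving diagonal sum is a \emph{convex combination} of the $\norm{V_{r,s}}_2^2$'s with weights exactly matching those in~\eqref{eq:assume_delta_3}/\eqref{eq:delta_3} — i.e. weights proportional to $\kappa_r^2\rho_s^2/(M_1 M_2)$ — rather than uniform weights $1/(|R'||S'|)$. If the distribution is as literally stated (uniform on the radius-$\sqrt{M_1}$ sphere), the two-point moments give uniform weights, and one needs to instead invoke Claim~\ref{claim:bound_norm_V}'s \emph{uniform} lower bound $\norm{V_{r,s}}_2^2 \geq (1-\sqrt\zeta)\delta_{n-1,d'-1}^2$ over all $r\in R', s\in S'$, which holds precisely because~\eqref{eq:assume_delta_3} forces \emph{every} $\kappa_r^2\rho_s^2$-weighted deficit to be tiny and $\kappa_r,\rho_s \geq \tau$ on $R',S'$. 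So the uniform bound from Claim~\ref{claim:bound_norm_V} is exactly what makes the argument robust to the normalization issue: whatever convex combination arises, it is bounded below by $(1-\sqrt\zeta)\delta_{n-1,d'-1}^2$. The remaining work is just checking that the non-diagonal sphere moments contribute at most $O(\sqrt\zeta)\delta_{n-1,d'-1}^2$ in absolute value, using $|\widehat{F}_t(r,s)| \leq \delta_{n-1,d'-1}$, $|R'|,|S'| \leq m$, and Cauchy–Schwarz; this is routine given the hierarchy of parameters.
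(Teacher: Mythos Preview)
Your approach is correct and matches the paper's: expand via~\eqref{eq:delta_4}, observe off-diagonal terms vanish in expectation, and bound the diagonal using Claim~\ref{claim:bound_norm_V}. However, you are overcomplicating the execution and introducing phantom error terms.

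The key simplification you miss is that~\eqref{eq:delta_3} is a \emph{pointwise} constraint: $\sum_{r\in R'} \kappa_r^2|\pi_r|^2 = M_1$ and $\sum_{s\in S'} \rho_s^2|\theta_s|^2 = M_2$ hold for every sample, not merely in expectation. So once the off-diagonal terms vanish (which they do \emph{exactly}, by phase-invariance of the uniform complex sphere, not just up to $O(\sqrt\zeta)$), the surviving quantity is
\[
\Expect{}{\sum_{r\in R',\, s\in S'} \kappa_r^2\rho_s^2 |\pi_r|^2|\theta_s|^2 \norm{V_{r,s}}_2^2}
\geq (1-\sqrt\zeta)\delta_{n-1,d'-1}^2 \cdot \Expect{}{\sum_{r,s}\kappa_r^2\rho_s^2|\pi_r|^2|\theta_s|^2}
= (1-\sqrt\zeta)\delta_{n-1,d'-1}^2 \cdot M_1 M_2,
\]
with the last step deterministic, not a moment computation. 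There is no need to compute $\Expect{}{|\kappa_r\pi_r|^2}$ individually, and there are no ``sphere corrections'' or $O(1/m)$ terms anywhere. Your claim that $\sqrt\zeta$ dominates $1/m$ is backwards in the hierarchy~\eqref{eq:hier2} (in fact $\sqrt\zeta \ll \tau \ll 1/m$), which would be fatal if such corrections existed---fortunately they do not. The final ingredient, which you omit, is $M_1 = 1 - \sum_{r\notin R'}\kappa_r^2 \geq 1 - m\eta^2 \geq 1-\zeta$ and similarly for $M_2$, giving $(1-\sqrt\zeta)(1-\zeta)^2 \geq 1-3\sqrt\zeta$.
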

\begin{proof}
  We use~\eqref{eq:delta_4} and linearity of expectation. If $(r,s)\neq (r',s')$, say $r\neq r'$, then
  \[
  \Expect{}{\pi_r\overline{\pi_{r'}}\theta_s\overline{\theta_{s'}}}
  =\Expect{}{\pi_r\overline{\pi_{r'}}}\Expect{}{\theta_s\overline{\theta_{s'}}}
  =0\Expect{}{\theta_s\theta_{s'}}=0,
  \]
  as the distribution of $\pi_r$ is invariant under multiplying any one of these numbers by a random sign. Thus,
  only $(r,s) = (r',s')$ contribute to the expectation, and so
  \begin{align*}
  \Expect{}{p(\{\pi_r\}_{r\in R'}, \{\theta_s\}_{s\in S'}) }
  &=\Expect{}{\sum\limits_{r\in R',s\in S'}\kappa_r^2\rho_s^2\card{\pi_r}^2\card{\theta_s}^2\norm{V_{r,s}}_2^2}\\
  &\geq \delta_{n-1,d'-1}^2(1-\sqrt{\zeta})\Expect{}{\sum\limits_{r\in R',s\in S'}\kappa_r^2\rho_s^2\card{\pi_r}^2\card{\theta_s}^2},
  \end{align*}
  where we used Claim~\ref{claim:bound_norm_V}. Using~\eqref{eq:delta_3} we get that
  the right hand side is at least $M_1M_2\delta_{n-1,d'-1}^2(1-\sqrt{\zeta})$, and
  lastly we note that
  \[
  M_1
  = 1-\sum\limits_{r\in T\setminus T'}\kappa_r^2
  \geq 1-m\eta^2
  \geq 1-\zeta,
  \]
  and similarly $M_2\geq 1-\zeta$, and the claim follows.
\end{proof}

Combining Claims~\ref{claim:bound_expectation} and~\ref{claim:pointwise_bd_p} give a strong upper bound on the variance of $p$.
\begin{claim}\label{claim:p_var_ub}
  ${\sf var}(p(\{\pi_r\}_{r\in R'}, \{\theta_s\}_{s\in S'}))\lll \sqrt{\zeta}\delta_{n-1,d'-1}^4$.
\end{claim}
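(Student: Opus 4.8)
\textbf{Proof plan for Claim~\ref{claim:p_var_ub}.} The plan is to exploit the fact that, by the two preceding claims, $p$ is a non-negative random variable that is pinned pointwise below a value $M$ and whose expectation is within $O(\sqrt{\zeta})\cdot M$ of $M$; any such random variable automatically has variance $O(\sqrt{\zeta})\cdot M^2$. Concretely, I would set $M = \delta_{n-1,d'-1}^2$ and recall from Claim~\ref{claim:pointwise_bd_p} that $0\leq p(\{\pi_r\}_{r\in R'},\{\theta_s\}_{s\in S'})\leq M$ holds for every admissible choice of the $\pi_r$ and $\theta_s$ (i.e.\ every choice obeying~\eqref{eq:delta_3}). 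Consequently $p^2\leq M p$ pointwise, and taking expectations over the distribution of $(\pi_r)_{r\in R'}$, $(\theta_s)_{s\in S'}$ described after~\eqref{eq:delta_3} gives $\Expect{}{p^2}\leq M\,\Expect{}{p}$.

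The next step is the elementary identity ${\sf var}(p) = \Expect{}{p^2} - \Expect{}{p}^2$, which combined with the previous inequality yields ${\sf var}(p)\leq M\,\Expect{}{p} - \Expect{}{p}^2 = \Expect{}{p}\,(M - \Expect{}{p})$. Now I would invoke Claim~\ref{claim:bound_expectation}, which gives $\Expect{}{p}\geq M(1-3\sqrt{\zeta})$; together with the pointwise bound $\Expect{}{p}\leq M$ this gives $\Expect{}{p}\leq M$ and $M-\Expect{}{p}\leq 3\sqrt{\zeta}M$. Substituting these into the last displayed inequality produces ${\sf var}(p)\leq 3\sqrt{\zeta}\,M^2 = 3\sqrt{\zeta}\,\delta_{n-1,d'-1}^4$, which is exactly the asserted bound ${\sf var}(p)\lll \sqrt{\zeta}\,\delta_{n-1,d'-1}^4$.

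There is essentially no obstacle in this final step itself: it is a two-line consequence of Claims~\ref{claim:pointwise_bd_p} and~\ref{claim:bound_expectation}. The real content sits upstream — in the random-perturbation setup, in Claim~\ref{claim:bound_norm_V} (which feeds the expectation lower bound), and in the reductions~\eqref{eq:assume_delta_1}--\eqref{eq:assume_delta_3} that justify assuming near-tightness of the Cauchy--Schwarz steps — and in the companion lower bound on ${\sf var}(p)$ that will follow. The point of Claim~\ref{claim:p_var_ub} is precisely to clash with that lower bound: since the modest influence at coordinate $j$ is large by~\eqref{eq:choose_j_in_delta}, one expects (via Claim~\ref{claim:large_modestinf_sig_var}) the functions $F_t'$ to have noticeable variance on $\Sigma_{{\sf modest}}$, which via the relaxed base case forces $p$ to genuinely fluctuate; the only way to reconcile this with the $O(\sqrt{\zeta})$ variance bound above is to conclude $\delta_{n,d'}\leq (1-A^{-C})\delta_{n-1,d'-1}$, completing the proof of Lemma~\ref{lem:delta_inductive}.
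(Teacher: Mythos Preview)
Your proof is correct and takes essentially the same approach as the paper: both combine the pointwise upper bound $p\leq M=\delta_{n-1,d'-1}^2$ from Claim~\ref{claim:pointwise_bd_p} with the expectation lower bound $\Expect{}{p}\geq M(1-3\sqrt{\zeta})$ from Claim~\ref{claim:bound_expectation} to squeeze the variance. The paper writes ${\sf var}(p)\leq M^2 - M^2(1-3\sqrt{\zeta})^2$ directly, whereas you route through $p^2\leq Mp$ to get ${\sf var}(p)\leq \Expect{}{p}(M-\Expect{}{p})$; both yield the same $O(\sqrt{\zeta})M^2$ bound.
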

\begin{proof}
    By definition of variance it is equal to
    \[
        \Expect{}{p(\{\pi_r\}_{r\in R'}, \{\theta_s\}_{s\in S'})^2}
        -\Expect{}{p(\{\pi_r\}_{r\in R'}, \{\theta_s\}_{s\in S'})}^2
        \leq \delta_{n-1,d'-1}^4 - \delta_{n-1,d'-1}^4(1-3\sqrt{\zeta})^2,
    \]
    where we used Claim~\ref{claim:pointwise_bd_p} to upper bound the first expectation and Claim~\ref{claim:bound_expectation} to lower bound the second expectation.
    Simplifying finishes the proof.
\end{proof}

\subsubsection{Lower Bounding the Variance of $p$}
The rest of the argument is devoted to lower bounding the variance of $p$.
Write $p = p_1+p_2+p_3+p_4$ where
\[
p_1 = \sum\limits_{r\in R', s\neq s'\in S'}\kappa_r^2\rho_s\rho_{s'}\card{\pi_r}^2\theta_s\overline{\theta_{s'}}\inner{V_{r,s}}{V_{r,s'}},
\qquad
p_2 = \sum\limits_{r\neq r'\in R', s\in S'}\kappa_r\kappa_{r'}\rho_s^2\pi_r\overline{\pi_{r'}}\card{\theta_s}^2\inner{V_{r',s}}{V_{r,s}},
\]
\[
p_3 = \sum\limits_{r\neq r'\in R', s\neq s'\in S'}\kappa_r\kappa_{r'}\rho_s\overline{\rho_{s'}}\pi_r\overline{\pi_{r'}}\theta_s\theta_{s'}\inner{V_{r',s}}{V_{r,s'}},
\qquad
p_4 = \sum\limits_{r\in R', s\in S'}\kappa_r^2\rho_s^2\card{\pi_r}^2\card{\theta_s}^2\norm{V_{r,s}}_2^2.
\]
Note that $p_4$ is almost constant, and close to $\E[p]$. Indeed, by Claim~\ref{claim:bound_norm_V}
\begin{align*}
\card{\E[p] - p_4}
\leq \sum\limits_{r\in R', s\in S'}\kappa_r^2\rho_s^2\card{\pi_r}^2\card{\theta_s}^2(\delta_{n-1,d'-1}^2 - \norm{V_{r,s}}_2^2)
&\leq \sqrt{\zeta}\delta_{n-1,d'-1}^2\sum\limits_{r\in R', s\in S'}\kappa_r^2\rho_s^2\card{\pi_r}^2\card{\theta_s}^2\\
&\leq \sqrt{\zeta}\delta_{n-1,d'-1}^2.
\end{align*}
Thus, we get that
\begin{align*}
{\sf var}(p)
&=
\E[\card{p-\E[p]}^2]\\
&=\E[\card{p_1+p_2+p_3+p_4-\E[p]}^2]\\
&\geq
\E[\card{p_1+p_2+p_3}^2]
- \E[(p_1+p_2+p_3)\overline{(p_4-\E[p])}]\\
&- \E[\overline{(p_1+p_2+p_3)}(p_4-\E[p])]+
\E[\card{p_4-\E[p]}^2].
\end{align*}
We have that $\E[\card{p_4-\E[p]}^2]\leq \zeta\delta_{n-1,d'-1}^4$, as well as
\[
\card{\E[(p_1+p_2+p_3)\overline{(p_4-\E[p])}]}
\lll \E[\zeta^{1/2}\card{p_1+p_2+p_3}^2+\zeta^{-1/2}\card{p_4-\E[p]}^2]
\]
and
\[
\card{\E[\overline{(p_1+p_2+p_3)}(p_4-\E[p])]}
\lll \E[\zeta^{1/2}\card{p_1+p_2+p_3}^2+\zeta^{-1/2}\card{p_4-\E[p]}^2].
\]
Therefore, we get that
\begin{equation}\label{eq:delta_5}
{\sf var}(p)
\geq \left(1-O(\sqrt{\zeta})\right)\E[\card{p_1+p_2+p_3}^2]
-O(\sqrt{\zeta}\delta_{n-1,d'-1}^4).
\end{equation}
Inspecting~\eqref{eq:delta_5}, we now calculate $\E[\card{p_1+p_2+p_3}^2]$.
We get diagonal terms $\E[\card{p_1}^2]$, $\E[\card{p_2}^2]$, $\E[\card{p_3}^2]$
as well as off diagonal terms such as $\E[p_1\overline{p_2}]$, and we claim the off diagonal terms are $0$.
Indeed, looking at $p_1\overline{p_2}$ for example, if we expand it out we will get that each term will be
of the a multiple of the pattern $\card{\theta_s}^2\overline{\pi_r}\pi_{r'}\theta_{s'}\theta_{s''}\card{\pi_{r''}}^2$ where $r\neq r'$
(and $s'\neq s''$), and in particular either $r$ or $r'$ (or both) are different from $r''$ --- say $r$ ---
in which case the expectation of this pattern is $0$ as the distribution is invariant under multiplying $\pi_r$
by a random sign. Therefore~\eqref{eq:delta_5} gives that
\begin{equation}\label{eq:delta_6}
  {\sf var}(p) \geq \left(1-O(\zeta^{1/2})\right)\E[\card{p_1}^2+\card{p_2}^2+\card{p_3}^2]
-O(\sqrt{\zeta}\delta_{n-1,d'-1}^4).
\end{equation}
To use~\eqref{eq:delta_6} effectively we must prove that at least one of the monomials $p_1,p_2$ or $p_3$ has a significant $2$-norm, and towards
this end we must show that some inner product $\inner{V_{r,s}}{V_{r',s'}}$ for $(r,s)\neq (r',s')$ is significant. This is the content
of the following lemma, whose proof is deferred to Section~\ref{sec:lem_V_not_perp}.
\begin{lemma}\label{lem:v_not_perp}
  There are $r,r'\in R'$ and $s,s'\in S$ such that $(r,s)\neq (r',s')$ and
  \[
  \card{\inner{V_{r,s}}{V_{r',s'}}} \geq c\delta_{n-1,d_1'-1}^2.
  \]
\end{lemma}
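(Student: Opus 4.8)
\textbf{Proof plan for Lemma~\ref{lem:v_not_perp}.}
The plan is to argue by contradiction: suppose that for all pairs $(r,s)\neq(r',s')$ in $R'\times S'$ we have $|\inner{V_{r,s}}{V_{r',s'}}| < c\delta_{n-1,d'-1}^2$ for a small constant $c=c(m,\alpha)$ to be fixed. I want to derive a contradiction with the lower bound on the modest influence $I_{j,{\sf modest}}[F]\geq 1/A$ coming from~\eqref{eq:choose_j_in_delta}, via the relaxed base case (Definition~\ref{def:relaxed_base}) applied to the functions $F_t'$. First I would reconcile the two ``near-tightness'' assumptions we have already extracted. From~\eqref{eq:assume_delta_1} together with the near-equality analysis of the naive Cauchy--Schwarz bound, near-tightness of Cauchy--Schwarz forces the vector $(\psi_t)_{t\in T'}$ (suitably normalized) to be nearly parallel to the vector $\big(\sqrt{\sum_{r\in R',s\in S'}\kappa_r^2\rho_s^2|\widehat F_t(r,s)|^2}\big)_{t\in T'}$; combined with~\eqref{eq:assume_delta_3} and Claim~\ref{claim:bound_norm_V}, which says each $\|V_{r,s}\|_2^2$ is within $\sqrt\zeta\delta_{n-1,d'-1}^2$ of its maximum $\delta_{n-1,d'-1}^2$, this pins down that, up to an $O(\sqrt\zeta)$-error, $|\widehat F_t(r,s)|^2 \approx \psi_t^2/\big(\text{something independent of }t\big)$ cannot happen unless the $V_{r,s}$ are pairwise nearly parallel. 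This is the crux: assuming the negation of the lemma, the $V_{r,s}$ are pairwise \emph{nearly orthogonal}, which together with each having norm $\approx\delta_{n-1,d'-1}$ will force $p=p_4$ to be essentially constant (variance $o(\delta_{n-1,d'-1}^4)$), exactly the regime in which the random perturbation argument yields nothing — so I need to convert near-orthogonality of the $V_{r,s}$ into structural information about $F$ itself.

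The key structural step is the following. Near-orthogonality of $\{V_{r,s}\}$, each of norm $\approx\delta_{n-1,d'-1}$, means (by a rank/dimension count, since there are $|R'|\cdot|S'|\leq m^2$ of them living in $\mathbb C^{|T'|}$) that $|T'|\geq |R'|\cdot|S'|$ and the $F_t'$-side of the decomposition must ``spread out'': more precisely, I would show that this near-orthogonality forces the functions $F_t'$ to be, on the support of $\mu$, essentially independent of $x_j$ restricted to $\Sigma_{{\sf modest}}$ — i.e. ${\sf var}_{\Sigma_{{\sf modest}}}(f_t')$ is small for all $t\in T'$ — because if some $f_t'$ had non-negligible variance on $\Sigma_{{\sf modest}}$, the relaxed base case (Statement~\ref{statement:relaxed_base_case} / Definition~\ref{def:relaxed_base}) applied to $\big(f_t', g_r', h_s'\big)$ would give $|\widehat F_t'(r,s)| = |\Expect{(x,y,z)\sim\mu}{f_t'(x)g_r'(y)h_s'(z)}| \leq (1-c\tau^C)$, and summing these gains over the many $(r,s)$ pairs, weighted by $\psi_t^2$, contradicts~\eqref{eq:assume_delta_2} which says $\sum_{t\in T'}\psi_t^2\sum_{r\in R',s\in S'}|\widehat F_t'(r,s)|^2 > (1-\zeta)\sum_{t\in T'}\psi_t^2$ — here I use that the relevant $g_r',h_s'$ span enough of $L_2(\Gamma_J)$ and $L_2(\Phi_J)$. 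Then from Claim~\ref{claim:large_modestinf_sig_var}, $I_{j,{\sf modest}}[F]\lll_{m,\alpha}\sum_t\psi_t^2{\sf var}_{\Sigma_{{\sf modest}}}(f_t')$, so smallness of all ${\sf var}_{\Sigma_{{\sf modest}}}(f_t')$ gives $I_{j,{\sf modest}}[F]$ small, contradicting~\eqref{eq:choose_j_in_delta} once the error terms (which are polynomial in $\zeta,\tau$ and hence negligible against $1/A^{O(1)}$ by the hierarchy~\eqref{eq:hier2}) are dominated.

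I expect the main obstacle to be the quantitative bookkeeping of the two near-tightness conditions: translating ``$|\inner{V_{r,s}}{V_{r',s'}}|$ small for all off-diagonal pairs'' plus ``$\|V_{r,s}\|_2\approx\delta_{n-1,d'-1}$'' into a usable statement that the $F_t'$ have small modest variance, while keeping all errors of the form $\zeta^{c}\delta_{n-1,d'-1}^{2}$ so that they lose to the $1/A^{O(1)}$ lower bound on $I_{j,{\sf modest}}[F]$. A secondary subtlety is that the relaxed base case as stated in Definition~\ref{def:relaxed_base} requires $f$ to have variance $\geq\tau$ on $\Sigma_{{\sf modest}}$ and gives loss $1-c\tau^C$; I would need to apply it with $\tau$ a parameter I choose (not the $\tau$ of~\eqref{eq:hier2}) and optimize — taking $\tau$ of the form $A^{-O(1)}$ so the resulting $1-A^{-O(1)}$ gain, raised to the relevant power, still beats the trivial bound, which is exactly why the final exponent $1/A^C$ in Lemma~\ref{lem:delta_inductive} is polynomial in $1/A$. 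Once Lemma~\ref{lem:v_not_perp} is in hand, one of $p_1,p_2,p_3$ has $2$-norm $\ggg c^2\tau^4\delta_{n-1,d'-1}^2$ after averaging (by expanding $\E[|p_i|^2]$ and keeping the diagonal term corresponding to the guaranteed pair), so~\eqref{eq:delta_6} gives ${\sf var}(p)\ggg\delta_{n-1,d'-1}^4/A^{O(1)}$, contradicting Claim~\ref{claim:p_var_ub} for $\zeta=A^{-R_2}$ with $R_2$ large enough — completing the proof of Lemma~\ref{lem:delta_inductive}, hence of Theorem~\ref{thm:nonembed_homogenous}.
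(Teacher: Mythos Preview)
Your high-level strategy matches the paper's: argue by contradiction, use the relaxed base case to show each $f_t'$ has small variance on $\Sigma_{{\sf modest}}$, then combine with Claim~\ref{claim:large_modestinf_sig_var} to contradict $I_{j,{\sf modest}}[F]\geq 1/A$. The dimensionality observation that near-orthogonality of the $V_{r,s}$ (each of norm $\approx\delta_{n-1,d'-1}$) forces $|T'|\geq |R'|\cdot|S'|$ is also correct and is used in the paper.

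The gap is in how you invoke the relaxed base case. You propose to apply it to $(f_t',g_r',h_s')$ and conclude $|\widehat{F}_t'(r,s)|\leq 1-c\tau^C$ whenever ${\sf var}_{\Sigma_{{\sf modest}}}(f_t')\geq\tau$, then ``sum these gains'' to contradict~\eqref{eq:assume_delta_2}. But~\eqref{eq:assume_delta_2} says $\sum_{r,s}|\widehat{F}_t'(r,s)|^2$ is close to $1$ on average in $t$; a bound $|\widehat{F}_t'(r,s)|\leq 1-c\tau^C$ on each individual coefficient gives no contradiction, since the coefficients are typically much smaller than $1$ anyway (they could all be $\approx 1/\sqrt{|R'||S'|}$ and still sum to $1$ in squares). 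The relaxed base case only bites when the expectation is \emph{close to} $1$, and for a fixed $t$ no single $|\widehat{F}_t'(r,s)|$ need be.

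The paper's fix is to reverse the direction. From the dimensionality bound it deduces, for every $(r,s)\in R'\times S'$, that $\sum_{t\in T'}|\widehat{F}_t'(r,s)|^2\geq 1-\zeta^{1/4}$ (otherwise summing the near-tightness~\eqref{eq:assume_delta_2} over $t$ would force $|T'|<|R'||S'|$ and give the lemma directly). This says $g_r'h_s'$ is $\zeta^{1/4}$-close to a function $F_{r,s}':=\big(\sum_t\overline{\widehat{F}_t'(r,s)}F_t'\big)/\big(\sum_t|\widehat{F}_t'(r,s)|^2\big)^{1/2}$ which, crucially, is constant on connected components (being a combination of the $F_t'$). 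Now $|\Expect{}{f_{r,s}'(x)g_r'(y)h_s'(z)}|\geq 1-\zeta^{1/4}$ with $F_{r,s}'=Wf_{r,s}'$, so the relaxed base case applies to $f_{r,s}'$ and gives ${\sf var}_{\Sigma_{{\sf modest}}}(f_{r,s}')\leq\zeta^c$. Finally one inverts: each $F_t'$ is $\sqrt\zeta$-close to $\sum_{r,s}\overline{\widehat{F}_t'(r,s)}g_r'h_s'\approx\sum_{r,s}\overline{\widehat{F}_t'(r,s)}F_{r,s}'$, so ${\sf var}_{\Sigma_{{\sf modest}}}(f_t')\lll_m\zeta^{c/2}$ for every $t\in T'$, yielding the contradiction you want. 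The intermediate object $F_{r,s}'$ is the missing idea.
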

\begin{proof}
  The proof is deferred to Section~\ref{sec:lem_V_not_perp}. We remark that this is the place in the argument in which the relaxed base case is used.
\end{proof}

We can now lower bound the variance of $p$.
\begin{lemma}\label{lem:lb_var_p}
  ${\sf var}(p)\ggg_{m}\tau^9\delta_{n-1,d_1'-1}^4$.
\end{lemma}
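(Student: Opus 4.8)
\textbf{Proof of Lemma~\ref{lem:lb_var_p} (proposal).} The plan is to start from~\eqref{eq:delta_6}, which already reduces the task to a lower bound on $\E[|p_1|^2+|p_2|^2+|p_3|^2]$ (the $p_4$-related terms there contribute only $O(\sqrt{\zeta}\,\delta_{n-1,d'-1}^4)$ and will be dominated), and then to feed in Lemma~\ref{lem:v_not_perp}. Since each $\E[|p_i|^2]\ge 0$, it is enough to bound the single $\E[|p_i|^2]$ for whichever $i$ indexes the monomial containing the ``large'' inner product $\inner{V_{r_0,s_0}}{V_{r_0',s_0'}}$ produced by that lemma; here $r_0,r_0'\in R'$ and $s_0,s_0'\in S'$ (so that the perturbations genuinely act on those coordinates) and $(r_0,s_0)\neq(r_0',s_0')$. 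We also record that $R',S'$ are nonempty and $M_1,M_2\in[1-\zeta,1]$, exactly as in the proof of Claim~\ref{claim:bound_expectation}.

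It is cleanest to pass to the variables $u_r=\kappa_r\pi_r$ and $v_s=\rho_s\theta_s$, so that $(u_r)_{r\in R'}$ and $(v_s)_{s\in S'}$ are independent and uniform on the complex spheres of radii $\sqrt{M_1}$ and $\sqrt{M_2}$ in $\mathbb{C}^{R'}$ and $\mathbb{C}^{S'}$, where $|R'|,|S'|\le m$ and $M_1,M_2\ge \tfrac12$. In these variables $p_1,p_2,p_3$ are the sums of the monomials $|u_r|^2 v_s\overline{v_{s'}}\inner{V_{r,s}}{V_{r,s'}}$ ($s\neq s'$), $u_r\overline{u_{r'}}|v_s|^2\inner{V_{r',s}}{V_{r,s}}$ ($r\neq r'$), and $u_r\overline{u_{r'}}v_s\overline{v_{s'}}\inner{V_{r,s}}{V_{r',s'}}$ ($r\neq r'$, $s\neq s'$) respectively (the conjugates in~\eqref{eq:delta_4} should read $\pi_r\overline{\pi_{r'}}\theta_s\overline{\theta_{s'}}$). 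Expanding $\E[|p_i|^2]$ and using that each sphere distribution is invariant under rotating any one coordinate by an independent uniform phase, all cross terms with an unbalanced coordinate vanish; a short case analysis then gives $\E[|p_3|^2]=\mu_u\mu_v\sum_{r\neq r',\,s\neq s'}|\inner{V_{r,s}}{V_{r',s'}}|^2$, where $\mu_u=\E[|u_r|^2|u_{r'}|^2]$ and $\mu_v=\E[|v_s|^2|v_{s'}|^2]$ (for $r\neq r'$, $s\neq s'$), and analogous nonnegative expressions for $\E[|p_1|^2]$ and $\E[|p_2|^2]$ that additionally feature the positive ``diagonal gap'' $\E[|u_r|^4]-\mu_u$, resp.\ its $v$-analogue. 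In every case $\E[|p_i|^2]$ is thus a nonnegative combination of the quantities $|\inner{V_{r,s}}{V_{r',s'}}|^2$ in which each coefficient is $\asymp_m 1$ (by the standard sphere moment formulas, using $|R'|,|S'|\le m$ and $M_1,M_2\ge\tfrac12$). Retaining only the term indexed by the pair from Lemma~\ref{lem:v_not_perp},
\[
\E\big[|p_1|^2+|p_2|^2+|p_3|^2\big]\ \gtrsim_m\ \big|\inner{V_{r_0,s_0}}{V_{r_0',s_0'}}\big|^2\ \gtrsim_m\ \delta_{n-1,d'-1}^4\ \ge\ \tau^9\,\delta_{n-1,d'-1}^4,
\]
the middle inequality being Lemma~\ref{lem:v_not_perp}. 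Plugging this into~\eqref{eq:delta_6}, and using that $\sqrt{\zeta}$ lies far below the $m$-dependent constant hidden in $\gtrsim_m$ (true under the hierarchy~\eqref{eq:hier2}, since $\zeta=A^{-R_2}$ with $R_2$ taken huge relative to everything governing that constant), the $O(\sqrt{\zeta}\,\delta_{n-1,d'-1}^4)$ loss is dominated and ${\sf var}(p)\gtrsim_m \tau^9\,\delta_{n-1,d'-1}^4$ follows.

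The genuinely substantive step is Lemma~\ref{lem:v_not_perp} (deferred to Section~\ref{sec:lem_V_not_perp}), and this is the one place where the relaxed base case is used: one must convert the hypothesis that $F$ has effective non-embedding degree at least $d'$ --- equivalently, by Claim~\ref{claim:large_modestinf_sig_var}, that a noticeable share of the SVD mass sits on parts $F_t'=Wf_t'$ whose $f_t'$ carry variance on $\Sigma_{{\sf modest}}$ --- together with the near-tightness of the Cauchy--Schwarz steps~\eqref{eq:assume_delta_1}--\eqref{eq:assume_delta_3}, into the conclusion that two of the vectors $V_{r,s}$ are not nearly orthogonal, and the gain $1-c\tau^C$ of the relaxed base case (Definition~\ref{def:relaxed_base}; property 6(a) of Theorem~\ref{thm:nonembed_deg_must_be_small_rephrase_maximal_relaxed}) is exactly what rules out the opposite scenario. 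Modulo that lemma, the argument above is the routine (if somewhat lengthy) bookkeeping of sphere moments and phase cancellation, running closely parallel to~\cite[Section~6]{BKMcsp2}.
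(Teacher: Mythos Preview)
Your proof is correct and a bit cleaner than the paper's. Both arguments start from~\eqref{eq:delta_6} and feed in Lemma~\ref{lem:v_not_perp}, but the executions diverge. The paper case-splits on which of $r^\star={r^\star}'$, $s^\star={s^\star}'$ holds: in the $p_3$ case it expands $\E[|p_3|^2]$ directly but bounds $\kappa_r^2\kappa_{r'}^2\rho_s^2\rho_{s'}^2\ge\tau^8$ and $\E[|\pi_r|^2|\pi_{r'}|^2]\gg_m 1$ separately (this is where the $\tau^8$ enters, and then the stated $\tau^9$), whereas in the $p_1,p_2$ cases it drops the moment computation in favor of an event-based argument, conditioning on a $\gg_{c,m}1$-probability event where the perturbation concentrates near the relevant coordinates so that $|p_i|\gg_{c,m}\delta_{n-1,d'-1}^2$ pointwise. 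Your change of variables $u_r=\kappa_r\pi_r$, $v_s=\rho_s\theta_s$ absorbs the $\kappa,\rho$ factors into the sphere moments (no $\tau$-loss), and the phase-cancellation calculation handles all three $p_i$ uniformly. One small imprecision worth flagging: for $p_1,p_2$ the expression is not literally a nonnegative combination of $|\inner{V_{r,s}}{V_{r',s'}}|^2$'s --- for $p_1$ the surviving terms give $\sum_{s\neq s'}\mu_v\,a(s,s')^*C\,a(s,s')$ with $a_r=\inner{V_{r,s}}{V_{r,s'}}$ and $C_{r,r''}=\E[|u_r|^2|u_{r''}|^2]$, so one still needs $C\succeq(\E[|u_r|^4]-\mu_u)I$ to isolate a single $|\inner{\cdot}{\cdot}|^2$; but this is exactly the ``diagonal gap'' you flag, and it is indeed $\asymp_m 1$. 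With that reading the argument is complete and in fact yields $\E[\sum_i|p_i|^2]\gg_{m,\alpha}\delta_{n-1,d'-1}^4$, stronger than the stated $\tau^9$ bound.
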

\begin{proof}
  By Lemma~\ref{lem:v_not_perp} there are $(r^{\star},s^{\star})\neq ({r^{\star}}',{s^{\star}}')$ such that
  $\card{\inner{V_{r^{\star},s^{\star}}}{V_{{r^{\star}}',{s^{\star}}'}}}\ggg_{m} c\delta_{n-1,d'-1}^2$, and we split into cases.
  \paragraph{The case $r^{\star}\neq {r^{\star}}'$, $s^{\star}\neq {s^{\star}}'$.}
  In this case we have by~\eqref{eq:delta_6} that
  \[
  {\sf var}(p) \geq \left(1-O(\zeta^{1/2})\right)\E[\card{p_3}^2]-O(\sqrt{\zeta}\delta_{n-1,d'-1}^4),
  \]
  and we lower bound $\E[\card{p_3}^2]$ by expanding. Indeed, it is equal to
  \[
  \sum\limits_{\substack{r\neq r'\in R', s\neq s'\in S'\\ r''\neq r'''\in R', s''\neq s'''\in S'}}
  \kappa_r\kappa_{r'}\kappa_{r''}\kappa_{r'''}\rho_s\rho_{s'}\rho_{s''}\rho_{s'''}
  \inner{V_{r',s}}{V_{r,s'}}
  \overline{\inner{V_{r''',s''}}{V_{r'',s'''}}}
  \Expect{}{\pi_r\overline{\pi_{r'}}\overline{\pi_{r''}}\pi_{r'''}\theta_s\overline{\theta_{s'}}\overline{\theta_{s''}}\theta_{s'''}}.
  \]
  The expectation is $0$ unless $r'=r'''$, $r = r''$ and $s' = s'''$, $s = s''$; this is because the distribution of each $\pi_r$ is invariant under multiplying
  each one of them by a random complex number of absolute value $1$. We thus get
  \begin{align*}
  \E[\card{p_3}^2]
  &=
  \sum\limits_{r\neq r'\in R', s\neq s'\in S'}
  \kappa_r^2\kappa_{r'}^2\rho_s^2\rho_{s'}^2
  \card{\inner{V_{r',s}}{V_{r,s'}}}^2
  \Expect{}{\card{\pi_r}^2\card{\pi_{r'}}^2\card{\theta_s}^2\card{\theta_{s'}}^2}\\
  &\geq\kappa_{r^{\star}}^2\kappa_{{r^{\star}}'}^2\rho_{s^{\star}}^2\rho_{{s^{\star}}'}^2
  \card{\inner{V_{{r^{\star}}',s^{\star}}}{V_{r^{\star},{s^{\star}}'}}}^2
  \Expect{}{\card{\pi_{r^{\star}}}^2\card{\pi_{{r^{\star}}'}}^2\card{\theta_{s^{\star}}}^2\card{\theta_{{s^{\star}}'}}^2}\\
  &\geq c\tau^8\delta_{n-1,d'-1}^4\Expect{}{\card{\pi_{r^{\star}}}^2\card{\pi_{{r^{\star}}'}}^2\card{\theta_{s^{\star}}}^2\card{\theta_{{s^{\star}}'}}^2}\\
  &=c\tau^8\delta_{n-1,d'-1}^4
  \Expect{}{\card{\pi_{r^{\star}}}^2\card{\pi_{{r^{\star}}'}}^2}
  \Expect{}{\card{\theta_{s^{\star}}}^2\card{\theta_{{s^{\star}}'}}^2}.
  \end{align*}
  We now argue that
  \[
  \Expect{}{\card{\pi_{r^{\star}}}^2\card{\pi_{{r^{\star}}'}}^2}, \Expect{}{\card{\theta_{s^{\star}}}^2\card{\theta_{{s^{\star}}'}}^2}\ggg_{m} 1
  \]
  We show the argument for the first expectation and the second one is similar.
  Looking at $(\kappa_r\pi_r)_{r\in R'}$, we see that this is a random vector in $\mathbb{C}^{\card{R'}}$ of $2$-norm equal to
  $1$, and so $\Expect{}{\card{\kappa_r\pi_r}^2\card{\kappa_{r'}\pi_{r'}}^2}\ggg_{m} 1$ for all $r,r'$. Hence,
  $\Expect{}{\card{\pi_{r^{\star}}}^2\card{\pi_{{r^{\star}}'}}^2}\ggg 1$.

  Overall, we get that ${\sf var}(p)\geq \left(\Omega_{c,m}(\tau^8) - \sqrt{\zeta}\right)\delta_{n-1,d'-1}^4$,
  hence ${\sf var}(p)\geq \tau^9 \delta_{n-1,d'-1}^4$.

  \paragraph{The case $r^{\star}= {r^{\star}}'$, $s^{\star} \neq {s^{\star}}'$.}
  Let $E$ be the event that
  \[
  \card{\kappa_{r^{\star}}\pi_{r^{\star}}}\geq \sqrt{M_1-\frac{c^{50}}{m^{50}}},
  \qquad
  \card{\rho_{s^\star}\theta_{s^{\star}}}\geq \frac{1}{\sqrt{2}}\sqrt{M_2-\frac{c^{50}}{m^{50}}},
  \qquad
  \card{\rho_{{s^\star}'}\theta_{{s^{\star}}'}}\geq \frac{1}{\sqrt{2}}-\frac{c^{50}}{m^{50}}.
  \]
  Then $\Prob{}{E}\ggg_{c,m} 1$, and if $E$ happens then for every $r\neq r^{\star}$, $s\neq s^{\star}, {s^{\star}}'$
  it holds that
  \[
  \card{\kappa_r\pi_r}
  \leq \sqrt{M_1-\card{\kappa_{r^{\star}}\pi_{r^{\star}}}^2}
  \leq \frac{c^{50}}{m^{50}},
  \qquad
  \card{\rho_s\theta_s}
  \leq
  \sqrt{M_2 - \card{\rho_{s^\star}\theta_{s^{\star}}}^2 - \card{\rho_{{s^\star}'}\theta_{{s^{\star}}'}}^2}
  \leq \frac{c^{50}}{m^{50}},
  \]
  and it follows that
  \[
  \card{p_1}
  \geq
  \card{\kappa_{r^{\star}}\pi_{r^{\star}}}^2
  \card{\rho_{s^\star}\theta_{s^{\star}}}
  \card{\rho_{{s^\star}'}\theta_{{s^{\star}}'}}
  \card{\inner{V_{r^{\star}, s^{\star}}}{V_{r^{\star}, {s^{\star}}'}}}
  -m^4\frac{c^{50}}{m^{50}}\max_{r,s}\norm{V_{r,s}}_2^2.
  \]
  By the choice of $r^{\star}, s^{\star}, {s^{\star}}'$ and Claim~\ref{claim:bound_norm_V}, the last expression
  is at least $\ggg_{c,m}\delta_{n-1, d-1}^2$. It follows that
  $\Expect{}{\card{p_1}^2}\ggg_{c,m}\Prob{}{E}\delta_{n-1, d-1}^4\ggg_{c,m} \delta_{n-1, d-1}^4$,
  and the proof is concluded by plugging this into~\eqref{eq:delta_6}.

  \paragraph{The case $r^{\star} \neq {r^{\star}}'$, $s^{\star} = {s^{\star}}'$.}
  This case is very similar to the case above, except that we work with $p_2$ instead of $p_1$. We omit the details.
\end{proof}

Combining Claim~\ref{claim:p_var_ub} and Lemma~\ref{lem:lb_var_p} gives contradiction to~\eqref{eq:hier2}. The contradiction means that
not all of~\eqref{eq:assume_delta_1},~\eqref{eq:assume_delta_2},~\eqref{eq:assume_delta_3} can hold, and therefore as explained above
the proof of Lemma~\ref{lem:delta_inductive} is concluded.

\subsubsection{Proof of Lemma~\ref{lem:v_not_perp}: the Dimensionality Argument}\label{sec:lem_V_not_perp}
Assume towards contradiction that Lemma~\ref{lem:v_not_perp} is false. We start with the following general lemma,
which quantifying the fact that $\ell$ vectors in an $\ell-1$-dimensional space cannot be all orthogonal.
\begin{fact}\label{fact:dimensionality}
  Suppose that $u_1,\ldots,u_{\ell}$ are unit vectors in $\mathbb{C}^{k}$ where $k\leq \ell-1$. Then
  there are $i,j$ such that $\card{\inner{u_i}{u_j}}\geq \frac{1}{k\ell}$.
\end{fact}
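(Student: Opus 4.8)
The plan is to argue via the Gram matrix of the $u_i$ together with a rank-versus-trace inequality. First I would set $G \in \mathbb{C}^{\ell\times\ell}$ to be the Gram matrix $G_{ij} = \inner{u_i}{u_j}$. Being a Gram matrix of vectors lying in the $k$-dimensional space $\mathbb{C}^k$, the matrix $G$ is Hermitian, positive semidefinite, and of rank at most $k$; its diagonal entries are $G_{ii} = \norm{u_i}_2^2 = 1$, so $\mathrm{tr}(G) = \ell$. We may assume $\ell \geq 2$, since for $\ell = 1$ the hypothesis $k \leq \ell - 1$ forces $k \leq 0$ and there are no unit vectors in $\mathbb{C}^k$.

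Next I would suppose toward a contradiction that $\card{\inner{u_i}{u_j}} < \tfrac{1}{k\ell}$ for every pair $i \neq j$, and estimate $\mathrm{tr}(G^2)$ in two ways. Using that $G$ is Hermitian, $\mathrm{tr}(G^2) = \sum_{i,j}\card{G_{ij}}^2 = \sum_i \card{G_{ii}}^2 + \sum_{i\neq j}\card{G_{ij}}^2 < \ell + \ell(\ell-1)\cdot\tfrac{1}{k^2\ell^2} = \ell + \tfrac{\ell-1}{k^2\ell} < \ell + 1$, where the last inequality holds since $k \geq 1$. On the other hand, $G$ has at most $k$ nonzero eigenvalues $\lambda_1,\ldots,\lambda_k \geq 0$ with $\sum_i \lambda_i = \mathrm{tr}(G) = \ell$, so by Cauchy--Schwarz $\mathrm{tr}(G^2) = \sum_i \lambda_i^2 \geq \tfrac{1}{k}\bigl(\sum_i \lambda_i\bigr)^2 = \tfrac{\ell^2}{k}$.

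Finally I would combine the two bounds: since $k \leq \ell - 1$, we have $\tfrac{\ell^2}{k} \geq \tfrac{\ell^2}{\ell-1} = (\ell+1) + \tfrac{1}{\ell-1} > \ell + 1$, which contradicts $\mathrm{tr}(G^2) < \ell + 1$. Hence the assumption fails, i.e.\ some off-diagonal entry satisfies $\card{\inner{u_i}{u_j}} \geq \tfrac{1}{k\ell}$, as claimed. There is no genuine obstacle here; the only points requiring care are checking that the chosen threshold $\tfrac{1}{k\ell}$ is small enough for the two estimates to be strictly incompatible (the slack $\tfrac{1}{\ell-1}$ above is what makes this work), and disposing of the degenerate case $\ell = 1$.
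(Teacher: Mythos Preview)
Your proof is correct and follows essentially the same approach as the paper: form the Gram matrix, use the rank bound together with Cauchy--Schwarz on the eigenvalues to get $\mathrm{tr}(G^2)\geq \ell^2/k$, and compare against the Frobenius-norm expansion $\mathrm{tr}(G^2)=\sum_{i,j}|\inner{u_i}{u_j}|^2$. The only cosmetic difference is that you phrase it as a contradiction while the paper argues directly that $\sum_{i\neq j}|\inner{u_i}{u_j}|^2 \geq \ell(\ell-k)/k \geq \ell/k$ and then picks out a large term by averaging.
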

\begin{proof}
  Construct the Gram matrix $M\in \mathbb{C}^{\ell\times \ell}$ defined as $M[i,j] = \inner{u_i}{u_j}$.
  Then $M$ is Hermitian and therefore it has non-negative real eigenvalues $a_1,\ldots,a_{\ell}$. As the
  rank of $M$ is at most $k$, it follows that only $k$ of these eigenvalues can be non-zero, and without
  loss of generality these are $a_1,\ldots,a_k$. Thus,
  we have that
  \[
  \sum\limits_{i,j}\card{\inner{u_i}{u_j}}^2
  = {\sf tr}(MM^{*})
  = \sum\limits_{i=1}^{k} a_i^2
  \geq \frac{1}{k}\left(\sum\limits_{i=1}^{k} a_i\right)^2
  = \frac{1}{k}{\sf tr}(M)^2
  = \frac{\ell^2}{k}.
  \]
  The summands where $i=j$ contribute $\ell$ to the left hand side, and so
  \[
  \sum\limits_{i\neq j}\card{\inner{u_i}{u_j}}^2\geq \frac{\ell(\ell-k)}{k}\geq \frac{\ell}{k}.
  \]
  It follows that there are $i\neq j$ such that $\card{\inner{u_i}{u_j}}\geq \frac{1}{k\ell}$.
\end{proof}

Next, we show that $F_t'$ has almost all of its mass on $r\in R'$ and $s\in S'$.
\begin{claim}\label{claim:almost_full_norm}
  For all $t\in T'$ it holds that $\sum\limits_{r\in R', s\in S'}\card{\widehat{F}_t'(r,s)}^2\geq 1-\sqrt{\zeta}$.
\end{claim}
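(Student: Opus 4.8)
The plan is to prove Claim~\ref{claim:almost_full_norm} by leveraging the assumption (for contradiction) that Lemma~\ref{lem:v_not_perp} fails together with the ``near-tightness'' hypotheses~\eqref{eq:assume_delta_1},~\eqref{eq:assume_delta_2},~\eqref{eq:assume_delta_3} that were derived earlier in the section. First I would recall that for each fixed $t\in T'$ we have $\sum_{r\in R, s\in S}\card{\widehat{F}_t'(r,s)}^2 = \norm{F_t'}_2^2 = 1$, so the content of the claim is that the mass of $F_t'$ on the ``small-coefficient'' indices $r\notin R'$ or $s\notin S'$ is at most $\sqrt{\zeta}$. The key observation is that~\eqref{eq:assume_delta_2} states $\sum_{t\in T'}\psi_t^2\sum_{r\in R', s\in S'}\card{\widehat{F}_t'(r,s)}^2 > (1-\zeta)\sum_{t\in T'}\psi_t^2$, which says that \emph{on average} over $t\in T'$ (weighted by $\psi_t^2$) the mass of $F_t'$ inside $R'\times S'$ is at least $1-\zeta$. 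To upgrade this averaged statement to a pointwise one for each $t\in T'$, I would use the fact that each $t\in T'$ has $\psi_t\geq \tau$, hence $\psi_t^2\geq \tau^2$, so that a deficiency of $\theta$ at a single $t$ contributes at least $\tau^2\theta$ to the left-hand-side deficit; since the total deficit is at most $\zeta\sum_{t\in T'}\psi_t^2\leq \zeta$, we get $\theta\leq \zeta/\tau^2\leq \sqrt{\zeta}$ by the hierarchy~\eqref{eq:hier2}, which is exactly the desired pointwise bound.

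Concretely, the main steps in order are: (1) write the per-$t$ deficiency $\theta_t := 1 - \sum_{r\in R', s\in S'}\card{\widehat{F}_t'(r,s)}^2 = \sum_{(r,s)\notin R'\times S'}\card{\widehat{F}_t'(r,s)}^2 \geq 0$, using $\norm{F_t'}_2 = 1$; (2) rewrite~\eqref{eq:assume_delta_2} as $\sum_{t\in T'}\psi_t^2\theta_t < \zeta\sum_{t\in T'}\psi_t^2 \leq \zeta$, where the last inequality uses $\sum_{t}\psi_t^2 = 1$ from Claim~\ref{claim:svd_effective_nonhomo}; (3) fix any $t_0\in T'$ and bound $\tau^2\theta_{t_0} \leq \psi_{t_0}^2\theta_{t_0} \leq \sum_{t\in T'}\psi_t^2\theta_t < \zeta$, using $\psi_{t_0}\geq \tau$ from the definition of $T'$; (4) conclude $\theta_{t_0} < \zeta/\tau^2$, and invoke the hierarchy $\zeta = A^{-R_2}$, $\tau = A^{-R_1}$ with $R_2 \gg R_1$ (so $\zeta/\tau^2 = A^{2R_1 - R_2} \leq A^{-R_2/2} = \zeta^{1/2} \leq \sqrt{\zeta}$, say, since $R_2 \geq 4R_1$) to get $\theta_{t_0} \leq \sqrt{\zeta}$, i.e. $\sum_{r\in R', s\in S'}\card{\widehat{F}_{t_0}'(r,s)}^2 \geq 1 - \sqrt{\zeta}$.

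I do not expect a serious obstacle here: the claim is essentially a bookkeeping consequence of already-established inequalities, and the only subtlety is making sure the exponents in the hierarchy~\eqref{eq:hier2} are arranged so that $\zeta/\tau^2 \leq \sqrt{\zeta}$, which holds as long as $R_2 \geq 4R_1$ — consistent with the stated relation $R_2^{-1} \ll R_1^{-1}$. One minor point to double-check is whether~\eqref{eq:assume_delta_2} is stated with the sum over $T'$ or over all of $T$; reading the excerpt it is over $T'$, which is exactly what is needed, so no extra argument comparing $T$ and $T'$ is required. If instead one wanted the bound uniformly including $t\in T\setminus T'$, that would fail (those $t$ have tiny $\psi_t$ and $F_t'$ could be anything), but the claim is correctly restricted to $t\in T'$, so the argument goes through cleanly. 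This claim will then feed into the dimensionality argument of Lemma~\ref{lem:v_not_perp}: knowing each $F_t'$ (for $t\in T'$) is almost entirely supported on the bounded index set $R'\times S'$ lets one view the vectors $(\widehat{F}_t'(r,s))_{(r,s)\in R'\times S'}$ as living in a space of dimension $\card{R'}\cdot\card{S'} \leq m^2$, and apply Fact~\ref{fact:dimensionality} to conclude two of them (hence, after unpacking, two of the $V_{r,s}$) cannot be near-orthogonal.
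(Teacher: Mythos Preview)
Your proposal is correct and follows essentially the same approach as the paper: rewrite~\eqref{eq:assume_delta_2} as $\sum_{t\in T'}\psi_t^2\theta_t < \zeta$, then use $\psi_t\geq \tau$ for $t\in T'$ together with the hierarchy $\zeta/\tau^2\leq\sqrt{\zeta}$ to extract the pointwise bound. One tiny remark: in step~(1) the equality $\theta_t = \sum_{(r,s)\notin R'\times S'}\card{\widehat{F}_t'(r,s)}^2$ would require $\{g_r'h_s'\}_{r\in R,s\in S}$ to be a complete basis, which is not guaranteed by the SVD claims; however only $\theta_t\geq 0$ (Bessel) is actually used, so this does not affect the argument.
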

\begin{proof}
  First, note that
  \[
  \sum\limits_{t\in T'}\psi_t^2 = 1-\sum\limits_{t\in T\setminus T'}\psi_t^2\geq 1-m\eta^2\geq 1-\eta.
  \]
  Combining with~\eqref{eq:assume_delta_2} we get that
  \[
  \sum\limits_{t\in T'}\psi_t^2\left(1-\sum\limits_{r\in R', s\in S'}\card{\widehat{F}_t'(r,s)}^2\right)\leq \zeta+\eta\leq 2\zeta.
  \]
  As each summand on the left hand side is non-negative, it follows that for each $t\in T'$ it holds that
  \[
  \sum\limits_{r\in R', s\in S'}\card{\widehat{F}_t'(r,s)}^2
  \geq 1-\frac{2\zeta}{\psi_t^2}\geq 1-\sqrt{\zeta}.
  \qedhere
  \]
\end{proof}

We can now show that for each $r\in R'$, $s\in S'$, the function $g_r'h_s'$ is close to be a linear combination of
$F_t'$ for $t\in T'$. Formally:
\begin{claim}\label{claim:almost_full_norm2}
  For all $r\in R'$ and $s\in S'$ it holds that $\sum\limits_{t\in T'}\card{\widehat{F}_t'(r,s)}^2\geq 1-\zeta^{1/4}$.
\end{claim}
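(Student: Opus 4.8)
The plan is to derive Claim~\ref{claim:almost_full_norm2} from Claim~\ref{claim:almost_full_norm} together with the counting inequality $\card{T'}\geq\card{R'}\card{S'}$, which is forced by the standing contradiction hypothesis that Lemma~\ref{lem:v_not_perp} fails. Claim~\ref{claim:almost_full_norm} already says that every row of the matrix $\big(\widehat{F}_t'(r,s)\big)_{t\in T',\,(r,s)\in R'\times S'}$ has squared $\ell_2$-mass at least $1-\sqrt{\zeta}$; once we know there are at least as many rows as columns, a one-line averaging argument upgrades "the average column mass is large'' to "every column mass is large'', which is exactly the assertion of the claim.

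First I would establish $\card{T'}\geq\card{R'}\card{S'}$. Recall the vectors $V_{r,s}\in\mathbb{C}^{T'}$ given by $V_{r,s}(t)=\widehat{F}_t(r,s)$; by Claim~\ref{claim:bound_norm_V} we have $\norm{V_{r,s}}_2^2\in[(1-\sqrt{\zeta})\delta_{n-1,d'-1}^2,\ \delta_{n-1,d'-1}^2]$ for every $(r,s)\in R'\times S'$, so these are nonzero vectors (and $T',R',S'$ are all nonempty, using $\sum_{t\in T'}\psi_t^2\geq 1-\eta$ and $M_1,M_2\geq 1-\zeta$). Suppose for contradiction that $\card{T'}\leq\card{R'}\card{S'}-1$. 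Applying Fact~\ref{fact:dimensionality} to the $\card{R'}\card{S'}$ unit vectors $V_{r,s}/\norm{V_{r,s}}_2$ inside $\mathbb{C}^{T'}$ yields distinct $(r,s),(r',s')\in R'\times S'$ with
\[
\card{\inner{V_{r,s}}{V_{r',s'}}}\geq \frac{\norm{V_{r,s}}_2\norm{V_{r',s'}}_2}{\card{T'}\,\card{R'}\,\card{S'}}\geq \frac{(1-\sqrt{\zeta})\,\delta_{n-1,d'-1}^2}{m^3},
\]
which, for $c$ small enough depending only on $m$ (note $c\lll m^{-1}$ in~\eqref{eq:hier2} may be taken $\lll m^{-3}$), contradicts the negation of Lemma~\ref{lem:v_not_perp} applied to the pair $(r,s),(r',s')$ (which is legitimate since $S'\subseteq S$). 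Hence $\card{T'}\geq\card{R'}\card{S'}$.

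Next I would carry out the averaging step. Since $\{F_t'\}_{t\in T}$ is orthonormal (Claim~\ref{claim:svd_effective_nonhomo}) and $\norm{\overline{g_r'h_s'}}_2=1$ (the marginal $\mu_{y,z}$ is a product measure and the $g_r'$, $h_s'$ are orthonormal), Bessel's inequality gives $\sum_{t\in T}\card{\widehat{F}_t'(r,s)}^2\leq 1$, so writing $q_{r,s}:=\sum_{t\in T'}\card{\widehat{F}_t'(r,s)}^2$ we have $q_{r,s}\leq 1$ for all $(r,s)\in R'\times S'$. Summing the bound of Claim~\ref{claim:almost_full_norm} over $t\in T'$ and reversing the order of summation,
\[
\sum_{(r,s)\in R'\times S'} q_{r,s}=\sum_{t\in T'}\sum_{(r,s)\in R'\times S'}\card{\widehat{F}_t'(r,s)}^2\geq \card{T'}\,(1-\sqrt{\zeta})\geq \card{R'}\card{S'}\,(1-\sqrt{\zeta}),
\]
using $\card{T'}\geq\card{R'}\card{S'}$. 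So we have $\card{R'}\card{S'}$ numbers, each at most $1$, with sum at least $\card{R'}\card{S'}(1-\sqrt{\zeta})$; therefore each of them is at least $1-\card{R'}\card{S'}\sqrt{\zeta}\geq 1-m^2\sqrt{\zeta}\geq 1-\zeta^{1/4}$, the last step holding because $\zeta=A^{-R_2}$ with $R_2$ chosen much larger than $m$ (so $m^2\leq\zeta^{-1/4}$). This is precisely Claim~\ref{claim:almost_full_norm2}.

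The main obstacle is the counting inequality $\card{T'}\geq\card{R'}\card{S'}$: from Claim~\ref{claim:almost_full_norm} alone one only gets the reverse bound $\card{T'}\leq\card{R'}\card{S'}/(1-\sqrt{\zeta})$, and in the homogeneity-free setting there is no a priori reason $\card{R'}\card{S'}$ cannot exceed $\card{T'}$ — it is exactly the assumed near-orthogonality of the vectors $V_{r,s}$ (the failure of Lemma~\ref{lem:v_not_perp}) that forbids this via the dimensionality Fact~\ref{fact:dimensionality}. A secondary point is the bookkeeping with the hierarchy~\eqref{eq:hier2}, ensuring simultaneously that $c$ is small enough relative to $1/m^3$ and that $m^2\sqrt{\zeta}\leq\zeta^{1/4}$.
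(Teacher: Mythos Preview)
Your proof is correct and uses essentially the same ingredients as the paper's --- Claim~\ref{claim:almost_full_norm}, the upper bound $q_{r,s}\leq 1$, and Fact~\ref{fact:dimensionality} combined with the negation of Lemma~\ref{lem:v_not_perp} --- just organized in the reverse order: you first isolate the counting inequality $\card{T'}\geq\card{R'}\card{S'}$ and then run the averaging argument directly, whereas the paper assumes some $q_{r^\star,s^\star}<1-\zeta^{1/4}$, deduces $\card{T'}\leq\card{R'}\card{S'}-1$ from the same summation, and then invokes Fact~\ref{fact:dimensionality} for the contradiction. The two arguments are contrapositives of one another.
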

\begin{proof}
  Assume this is not the case, so that for some $r^{\star}$ and $s^{\star}$ we have that
  $\sum\limits_{t\in T'}\card{\widehat{F}_t'(r^{\star},s^{\star})}^2\leq 1-\zeta^{1/4}$;
  we remark that for any other $r,s$ this sum is at most $1$. Summing Claim~\ref{claim:almost_full_norm}
  over $t\in T'$ gives:
  \[
  (1-\sqrt{\zeta})\card{T'}
  \leq
  \sum\limits_{r\in R', s\in S'}\sum\limits_{t\in T'}\card{\widehat{F}_t'(r,s)}^2
  \leq (\card{R'}\card{S'} - 1) + 1-\zeta^{1/4}.
  \]
  Thus,
  \[
  \card{T'}
  \leq
  (\card{R'}\card{S'} - 1)(1+O(\sqrt{\zeta}))
  +(1-\zeta^{1/4})(1+O(\sqrt{\zeta}))
  \leq
  \card{R'}\card{S'} + O_{m}(\sqrt{\zeta}) - \frac{1}{2}\zeta^{1/4}
  <\card{R'}\card{S'},
  \]
  and as $\card{T'}$ is an integer it follows that $\card{T'}\leq \card{R'}\card{S'} - 1$. From Fact~\ref{fact:dimensionality}
  it follows that there are $(r,s)\neq (r',s')$ such that
  \[
  \card{\inner{\frac{V_{r,s}}{\norm{V_{r,s}}_2}}{\frac{V_{r',s'}}{\norm{V_{r',s'}}_2}}}\geq \frac{1}{m^2},
  \]
  and by Claim~\ref{claim:bound_norm_V} we have $\card{\inner{V_{r,s}}{V_{r',s'}}}\geq c\delta_{n-1,d'-1}^2$,
  and contradiction.
\end{proof}

Motivated by Claim~\ref{claim:almost_full_norm2}, we define
\[
F_{r,s}' = \frac{\sum\limits_{t\in T'}\overline{\widehat{F_t}'(r,s)} F_t'}{\sum\limits_{t\in T'}\card{\widehat{F_t}'(r,s)}^2}.
\]
Thus, $F_{r,s}'$ has $2$-norm $1$ and by Claim~\ref{claim:almost_full_norm2}
\[
\Expect{(x,y,z)\sim \mu}{F_{r,s}'(y,z)g_r'(y)h_s'(z)}
=\sqrt{\sum\limits_{t\in T'}\card{\widehat{F_t}'(r,s)}^2}
\geq 1-\zeta^{1/4}.
\]
It follows that
\begin{equation}\label{eq:delta_7}
\norm{F_{r,s}' - g_r'h_s'}_2^2\leq 2\zeta^{1/4}.
\end{equation}
Also, as each $F_t'$ is constant on connected components it follows that
each $F_{r,s}'$ is constant on connected components and so it can be written as $F_{r,s}' = W f_{r,s}'$ for $f_{r,s}\colon \Sigma\to\mathbb{C}$.
Thus, as
\[
\card{\Expect{(x,y,z)\sim \mu}{f_{r,s}'(x)g_r'(y)h_s'(z)}}
\geq
\card{\Expect{(x,y,z)\sim \mu}{F_{r,s}'(y,z)g_r'(y)h_s'(z)}}
\geq 1-\zeta^{1/4}
\]
it follows by the relaxed base case (Definition~\ref{def:relaxed_base}) that
${\sf var}_{\Sigma_{{\sf modest}}}(f_{r,s}')\leq \zeta^{c}$ for all $r\in R'$ and $s\in S'$.
We are now going to use Claim~\ref{claim:almost_full_norm} and~\eqref{eq:delta_7} to argue
that each $F_t'$ is close to a linear combination of $F_{r,s}'$, and hence has small variance
on $\Sigma_{{\sf modest}}$.

\begin{claim}\label{claim:use_relaxed_base}
  For all $t\in T'$, write $F_t' = W f_t'$ for $f_t'\colon \Sigma\to\mathbb{C}$.
   Then for all $t\in T'$ we have
   \[
   {\sf var}_{\Sigma_{{\sf modest}}}(f_{t}')\lll_{m} \zeta^{c/2}.
   \]
\end{claim}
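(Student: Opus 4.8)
The plan is to argue that, for every $t\in T'$, the function $F_t'$ is $L_2$-close to a short linear combination of the functions $\overline{F_{r,s}'}$ with $r\in R'$, $s\in S'$, and then to transport the smallness of ${\sf var}_{\Sigma_{{\sf modest}}}(f_{r,s}')$ (already established via the relaxed base case) to $f_t'$ through this approximation and through the isometry $W$.

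First I would set up the expansion. Since $\mu_{y,z}$ is uniform, hence a product measure, completing $\{\overline{g_r'}\}_{r\in R}$ and $\{\overline{h_s'}\}_{s\in S}$ to orthonormal bases of $L_2(\Gamma;\mu_y)$ and $L_2(\Phi;\mu_z)$ and taking products yields an orthonormal basis of $L_2(\Gamma\times\Phi;\mu_{y,z})$, with respect to which $\inner{F_t'}{\overline{g_r'h_s'}} = \widehat{F_t'}(r,s)$. Parseval together with Claim~\ref{claim:almost_full_norm} then gives
\[
\norm{F_t' - \sum_{r\in R', s\in S'}\widehat{F_t'}(r,s)\,\overline{g_r'h_s'}}_2^2 = 1 - \sum_{r\in R', s\in S'}\card{\widehat{F_t'}(r,s)}^2 \leq \sqrt{\zeta}.
\]
Using~\eqref{eq:delta_7} we have $\norm{\overline{F_{r,s}'}-\overline{g_r'h_s'}}_2 \leq \sqrt{2}\,\zeta^{1/8}$ for each $r\in R', s\in S'$, and since $\card{\widehat{F_t'}(r,s)}\leq 1$ by Cauchy--Schwarz and there are at most $m^2$ such pairs, the triangle inequality yields
\[
\norm{F_t' - \sum_{r\in R', s\in S'}\widehat{F_t'}(r,s)\,\overline{F_{r,s}'}}_2 \lll_m \zeta^{1/8}.
\]

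Next I would descend through $W$. Because $W$ is a linear isometry commuting with complex conjugation, and $F_t' = Wf_t'$, $F_{r,s}' = Wf_{r,s}'$, the inequality above transfers verbatim to the univariate functions on $\Sigma$:
\[
\norm{f_t' - \sum_{r\in R', s\in S'}\widehat{F_t'}(r,s)\,\overline{f_{r,s}'}}_{L_2(\mu_x)} \lll_m \zeta^{1/8}.
\]
Finally I would pass to variances on $\Sigma_{{\sf modest}}$. For univariate $f$, the map $f\mapsto {\sf var}_{\Sigma_{{\sf modest}}}(f)^{1/2}$ is a seminorm bounded by $O_{\alpha,m}(\norm{f}_{L_2(\mu_x)})$ (since each atom of $\mu_x$ has mass $\ggg_{\alpha,m} 1$, the pointwise values on $\Sigma_{{\sf modest}}$ are controlled by the $L_2(\mu_x)$-norm). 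Applying the triangle inequality for this seminorm, using ${\sf var}_{\Sigma_{{\sf modest}}}(\overline{f_{r,s}'}) = {\sf var}_{\Sigma_{{\sf modest}}}(f_{r,s}')\leq \zeta^{c}$ and again that $\card{\widehat{F_t'}(r,s)}\le1$ with at most $m^2$ pairs, gives
\[
{\sf var}_{\Sigma_{{\sf modest}}}(f_t')^{1/2} \lll_{\alpha,m} \zeta^{1/8} + m^2\,\zeta^{c/2}.
\]
Since $c\ll 1$ we have $\zeta^{1/8}\leq\zeta^{c/2}$, so squaring yields ${\sf var}_{\Sigma_{{\sf modest}}}(f_t') \lll_m \zeta^{c}\leq\zeta^{c/2}$, which is the claimed bound.

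The whole argument is routine chaining; the only places needing a little care are the product-measure structure of $\mu_{y,z}$ used to complete $\{\overline{g_r'h_s'}\}_{r\in R, s\in S}$ to an orthonormal basis, the facts that $W$ is an isometry and commutes with conjugation, and the elementary comparison between $\norm{\cdot}_{L_2(\mu_x)}$ and ${\sf var}_{\Sigma_{{\sf modest}}}(\cdot)$ for univariate functions. There is no genuine obstacle here — the substantive input, namely the variance bound on the $f_{r,s}'$, was supplied by the relaxed base case in the preceding step.
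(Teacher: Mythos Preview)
Your proof is correct and follows essentially the same approach as the paper: approximate $F_t'$ by a linear combination of the $g_r'h_s'$ (using Claim~\ref{claim:almost_full_norm}), replace these by the $F_{r,s}'$ via~\eqref{eq:delta_7}, descend through the isometry $W$, and then pass to variances on $\Sigma_{{\sf modest}}$ by the seminorm argument. The only cosmetic difference is that you place the conjugates on $g_r'h_s'$ (so your approximant is the genuine orthogonal projection), whereas the paper conjugates the coefficients instead; since ${\sf var}_{\Sigma_{{\sf modest}}}$ is unchanged under conjugation this is immaterial.
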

\begin{proof}
  By Claim~\ref{claim:almost_full_norm2}, taking $\tilde{F}_t' = \sum\limits_{r\in R', s\in S'}\overline{\widehat{F}_t'(r,s)}g_r'h_s'$
  we have
  \[
  \Expect{(x,y,z)\sim \mu}{\tilde{F}_t'(y,z) F_t'(y,z)}
  =\sqrt{\sum\limits_{r\in R', s\in S'}\card{\widehat{F}_t'(r,s)}^2}
  \geq 1-\sqrt{\zeta},
  \]
  hence $\norm{\tilde{F}_t' - F_t'}_2\lll \zeta^{1/4}$.
  Using~\eqref{eq:delta_7} we get that
  \[
  \norm{\tilde{F}_t' - \sum\limits_{r\in R', s\in S'}\overline{\widehat{F}_t'(r,s)}F_{r,s}'}_2
  \lll_{m}\zeta^{1/4},
  \]
  so by the triangle inequality
  \[
  \norm{F_t' - \sum\limits_{r\in R', s\in S'}\overline{\widehat{F}_t'(r,s)}F_{r,s}'}_2
  \lll_{m}\zeta^{1/4}.
  \]
  From the above it follows that
  \[
  \norm{f_t' - \sum\limits_{r\in R', s\in S'}\overline{\widehat{F}_t'(r,s)}f_{r,s}'}_2
  \lll_{m}\zeta^{1/4},
  \]
  and as ${\sf var}_{\Sigma_{{\sf modest}}}(f_{r,s}')\leq \zeta^{c}$ for all $r\in R'$, $s\in S'$
  it follows that ${\sf var}_{\Sigma_{{\sf modest}}}(f_{t}')\leq \zeta^{c/2}$, as desired.
\end{proof}

By Claim~\ref{claim:use_relaxed_base} we get that
\[
\sum\limits_{t\in T'}\psi_t^2 {\sf var}_{\Sigma_{{\sf modest}}}(f_{t}')
\leq \zeta^{c/2},
\]
which contradicts the choice of the partition $I,J$ as in~\eqref{eq:choose_j_in_delta} and~\eqref{eq:hier2}.
This contradiction completes the proof of Lemma~\ref{lem:v_not_perp}.\qed

\subsection{Finishing the Proof of Theorem~\ref{thm:nonembed_homogenous}}\label{sec:finish_pf_beta_delta}
We are now ready to prove Theorem~\ref{thm:nonembed_homogenous}. We will take the parameters
\[
0\ll \xi \ll C^{-1}\ll L^{-1}\ll \alpha, m^{-1}\leq 1.
\]
Let $n\geq d\geq d'$ and $\mu$ be as in Theorem~\ref{thm:nonembed_homogenous}. Applying Lemma~\ref{lem:beta_bound_iterated},
if the first item holds we are done, so assume otherwise. By the second item we get that there are $n'$, $e$ and $e'$
satisfying $e\geq d/2$, $n'\leq L e$ and $e'\geq d'/2$ such that
$\beta_{n,d,d'}\leq \beta_{n',e,e'}$. Clearly, $\beta_{n',e,e'}\leq \delta_{n',e,e'}$, and we now use
Lemma~\ref{lem:delta_inductive}. Now that $e'\geq \frac{e}{e^{\xi}}\geq \frac{n}{L e^{\xi}}$, and applying Lemma~\ref{lem:delta_inductive}
$e'/2$ times we get that
\[
\delta_{n',e,e'}
\leq \left(1-\frac{1}{L^{C}e^{\xi C}}\right)\delta_{n'-1,e-1,e'-1}
\leq\ldots\leq
\left(1-\frac{1}{2L^{C}e^{\xi C}}\right)^{e'/2}\delta_{n'-e'/2,e-e'/2,e'-e'/2},
\]
which is at most $\left(1-\frac{1}{2L^{C}e^{\xi C}}\right)^{e'/2}$ by the trivial bound $\delta_{n'-e'/2,e-e'/2,e'-e'/2}\leq 1$.
It follows that
\[
\delta_{n',e,e'}
\leq 2^{-\Omega\left(\frac{e'/2}{2L^{C}e^{\xi C}}\right)}
\leq 2^{-\Omega_{m,\alpha}(e'^{1-\sqrt{\xi}})},
\]
concluding the proof.

\subsection{Finishing the Proof of Theorem~\ref{thm:nonembed_deg_must_be_small}: the Chain of Implications}
We now note that Theorem~\ref{thm:nonembed_deg_must_be_small} has been proved, and below is the chain of implications:
\begin{enumerate}
  \item We completed the proof of Theorem~\ref{thm:nonembed_homogenous} in Section~\ref{sec:finish_pf_beta_delta}.
  \item By Section~\ref{sec:nonembed_hom_implies} it follows that Theorem~\ref{thm:nonembed_homogenous'} is true.
  \item By Claim~\ref{claim:beta_thm_implies} (namely, the content of Section~\ref{sec:truncate})
  it follows that Theorem~\ref{thm:nonembed_deg_must_be_small_rephrase_maximal_relaxed} is true.
  \item By Section~\ref{sec:relaxed_implies} it follows that Theorem~\ref{thm:nonembed_deg_must_be_small_rephrase_maximal2} is true.
  \item By Section~\ref{sec:rephrase_implies} it follows that Theorem~\ref{thm:nonembed_deg_must_be_small} is true.
\end{enumerate}

\section{A Structural Results for $f$ After Path Tricks: Proof of Theorem~\ref{thm:main_stab_3_saturated}}\label{sec:the_hastad_argument}
In this section, we prove Theorem~\ref{thm:main_stab_3_saturated}. We start by giving a high level overview of the argument.
Given a distribution $\mu$ and functions $f\colon\Sigma^n\to\mathbb{C}$, $g\colon\Gamma^n\to\mathbb{C}$ and $h\colon\Phi^n\to\mathbb{C}$
as in the statement of Theorem~\ref{thm:main_stab_3_saturated}, we perform the following steps:
\begin{enumerate}
  \item
  \textbf{Softly truncating the high non-embedding parts:}  First, consider the high non-embedding degree parts of $f$, $g$ and $h$, namely
  $(I - \mathrm{T}_{\text{non-embed}, 1-\delta})f$, $(I - \mathrm{T}_{\text{non-embed}, 1-\delta})g$
  and $(I - \mathrm{T}_{\text{non-embed}, 1-\delta})h$. We show that they give very little contribution
  to the $3$-wise correlation, and hence conclude that
  \begin{equation}\label{eq:hastad_argument_1}
  \card{
  \Expect{(x,y,z)\sim \mu^{\otimes n}}{\mathrm{T}_{\text{non-embed}, 1-\delta}f(x) \mathrm{T}_{\text{non-embed}, 1-\delta}g(y) \mathrm{T}_{\text{non-embed}, 1-\delta}h(z)}
  }\geq \frac{\eps}{2}.
  \end{equation}
  Set $f' = \mathrm{T}_{\text{non-embed}, 1-\delta}f$, $g' = \mathrm{T}_{\text{non-embed}, 1-\delta}g$ and
  $h' = \mathrm{T}_{\text{non-embed}, 1-\delta}h$, so that almost all of the mass of these functions lies on monomials
  with non-embedding degree which is constant.

  \item \textbf{Reducing the non-embedding degree to be $0$:}
  By random restriction, we show that with noticeable probability,
  after a random restriction almost all of the mass of $f'$, $g'$ and $h'$ collapses to non-embedding degrees $0$
  and~\eqref{eq:hastad_argument_1} continues to hold (with some loss in parameters). We then use an averaging operator
  that annihilates all of the mass of $f'$, $g'$ and $h'$ of non-zero non-embedding degree, apply it on $f'$, $g'$ and $h'$ and show that~\eqref{eq:hastad_argument_1}
  continues to hold (with an additional loss in parameters). We perform these steps (as opposed to harsher truncation-style operations) so as to arrive at functions
  that have embedding degree which is $0$ which are also bounded.

  \item \textbf{Shifting to the uniform distribution over an equation on $H$:}
  By another random restriction argument we switch from the distribution $\mu$ to a distribution $\nu$, in which sampling
  $(x,y,z)\sim \nu$, the distribution of $(\sigma(x),\gamma(y),\phi(z))$ is uniform over $\sett{(a,b,c)\in H^3}{a+b+c = 0}$.
  We argue that moving to the distribution $\nu$,~\eqref{eq:hastad_argument_1}
  continues to hold for the restrictions (again, with some loss in parameters).
  Once we have shifted the distribution to be $\nu$, we can use a standard Fourier-analytic computation
  showing that the restriction we chose for $f'$ has a significant Fourier coefficient.
  In our case, this last assertion translates to the fact that the restriction of $f'$ is correlated with a function of the form $\chi\circ \sigma$ for some character $\chi$ over $H$.

  \item \textbf{Unraveling restrictions and averaging:}
  The last part in the argument is to invoke a result that we refer to as the restriction inverse theorem. Morally speaking,
  up until now we have shown that after random restriction, with noticeable probability $f'$ is correlated with a function of the form $\chi\circ \sigma$.
  Now we would like to use this information to conclude a correlation result for the function $f'$ itself. The restriction
  inverse theorem is a result asserting precisely that: for restrictions of a function $f'$ to be correlated with a function
  of the form $\chi\circ \sigma$ with noticeable probability, it must be the case that $f'$ itself is already correlated with
  a function of the form $L\cdot \chi\circ \sigma$, where $L$ is a low-degree function.
\end{enumerate}

The restriction inverse theorem will be the subject of discussion in Section~\ref{sec:rest_inverse}, and for now we give a special case of it (which is
the version necessary for the current argument).
\begin{thm}\label{thm:restriction_inverse_special}
  For all $\alpha>0$, $r, m\in\mathbb{N}$, $\rho\in (0,1)$ and $\eps,\delta>0$ there are $d\in\mathbb{N}$ and $\eps'>0$ such that the following holds.
  Suppose that $\Sigma$ is an alphabet of size at most $m$ and $\mu$ is a distribution over $\Sigma$ in which the probability of each atom is at least
  $\alpha$. Suppose that $(H,+)$ is an Abelian group of size at most $r$ and $\sigma\colon \Sigma\to H$ is a map that has $0$ in its image.

  Suppose that $\mu = \rho \nu + (1-\rho)\nu'$ for distributions $\nu,\nu'$ in which the probability of each atom is at least $\alpha$.
  If $f\colon(\Sigma^n,\mu^{\otimes n})\to\mathbb{C}$ is a $1$-bounded function such that
  \[
  \Prob{I\subseteq_{\rho}[n], \tilde{x}\sim{\nu'}^{\overline{I}}}{\exists \chi\in \hat{H}^{I},
  \card{\inner{f_{\overline{I}\rightarrow \tilde{x}}}{\chi\circ\sigma}_{L_2(\Sigma^{I}; \nu^{I})}}\geq \eps}\geq \delta,
  \]
  then there exists $\chi'\in \hat{H}^{n}$ and $L\colon\Sigma^n\to\mathbb{C}$ a function of degree at most $d$ and $\norm{L}_2\leq 1$ such that
  \[
    \card{\inner{f}{L \chi'\circ\sigma}}\geq \eps'.
  \]
  Quantitatively,
  $d = {\sf poly}_{m,\alpha}\left(\frac{\log(1/\eps)\log(1/\delta)}{\rho}\right)$ and
  $\eps' = 2^{-{\sf poly}_{m,\alpha}\left(\frac{\log(1/\eps)\log(1/\delta)}{\rho}\right)}$.
\end{thm}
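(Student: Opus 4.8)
The plan is to deduce this from the general restriction inverse theorem of Section~\ref{sec:rest_inverse}; below I outline how the argument there runs in the present special case. The one external ingredient is the direct product testing theorem (Theorem~\ref{thm:dp_intro}), and the feature that makes the special case go through is that the ``alphabet'' of possible local factors is finite: a product function of the form $\chi\circ\sigma$ has each local factor equal to one of the at most $\card{\hat H}\le r$ functions $\sett{x\mapsto \psi(\sigma(x))}{\psi\in\hat H}$. Throughout one moves freely between $\mu$, $\nu$ and $\nu'$; since all three give every atom mass at least $\alpha$, one may interchange them at the price of an auxiliary random restriction and a bounded quantitative loss, exactly as in Section~\ref{sec:random_restrictions}.

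\textbf{Step 1: from a $\delta$-fraction of good restrictions to a direct product encoding.} Call a restriction $(I,\tilde x)$ \emph{good} if some $\chi\in\hat H^{I}$ has $\card{\inner{f_{\overline I\to\tilde x}}{\chi\circ\sigma}_{\nu^I}}\ge\eps$, and for each good $(I,\tilde x)$ fix a maximizing $\chi(I,\tilde x)$ (say lexicographically). Identifying $\hat H$ with $[R]$, $R\le r$, this yields a partially defined table $G$ over $k$-subsets $I$ (with $k\asymp \rho n$) whose entries lie in $[R]^I$. The first task is to show that two overlapping good restrictions have consistent characters on a random sub-sample of the overlap: if $I,I'$ share a common part $C$ and are extended by fresh alive coordinates and fresh fixings of their dead coordinates, then with probability bounded below in terms of $\eps,\delta,\rho,m,r$ both are good and, for a random $T$ of density $\beta$, one has $\chi(I,\tilde x)|_{C\cap T}=\chi(I',\tilde x')|_{C\cap T}$. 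This is a Cauchy--Schwarz / second-moment computation exploiting that $\chi\circ\sigma$ is a product function: factoring the correlation across $C$ and $I\setminus C$ and averaging out $I\setminus C$, the correlation on $C$ itself need not survive in full (averaging a non-trivial character of $\hat H$ over $\nu'$ strictly contracts), which is exactly why one must run the test with $\beta<1$, matching the $r=O(\log(1/\eps))$ phenomenon discussed after Theorem~\ref{thm:dp_intro}. Passing to majorities over the dead-coordinate fixings $\tilde x$ turns $G$ into a genuine table indexed by $k$-subsets that passes ${\sf DP}(\rho',\alpha',\beta')$ with probability at least some $\eps_1=\eps_1(\eps,\delta,\rho,m,r)>0$.

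\textbf{Step 2: extract a global character, twist, and finish with a low-degree argument.} Applying Theorem~\ref{thm:dp_intro} produces a global $\chi^{\star}\colon[n]\to\hat H$ and a constant $r_0=r_0(\eps,\delta,\rho,m,r)$ with $\Delta(\chi^{\star}|_I,G[I])\le r_0$ for at least an $\eps_2$-fraction of $k$-subsets $I$. Set $\tilde f=\overline{\chi^{\star}\circ\sigma}\cdot f$, still $1$-bounded. For every good $(I,\tilde x)$ whose character agrees with $\chi^{\star}$ off at most $r_0$ coordinates of $I$, the function $\chi(I,\tilde x)\circ\sigma$ differs from $(\chi^{\star}|_I)\circ\sigma$ by multiplication by an $r_0$-junta of sup-norm $1$, so $\tilde f_{\overline I\to\tilde x}$ correlates at least $\eps$ with a degree-$\le r_0$ function of $2$-norm $\le1$, hence $W_{\le r_0}[\tilde f_{\overline I\to\tilde x}]\ge\eps^2$; by Step 1 this holds for an $\eps_3=\eps_3(\eps_2,\dots)$-fraction of restrictions. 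Averaging over restrictions, an Efron--Stein component of $\tilde f$ supported on a set $S$ with $\card S=s$ contributes to level $\le r_0$ of the restriction to $I$ only when $\card{S\cap I}\le r_0$, an event of probability exponentially small in $\rho s$ once $s\gg r_0/\rho$; therefore $\Expect{(I,\tilde x)}{W_{\le r_0}[\tilde f_{\overline I\to\tilde x}]}\le W_{\le D}[\tilde f]+o(1)$ for $D=O(r_0\log(1/\eps_3)/\rho)$, giving $W_{\le D}[\tilde f]\ggg \eps^2\eps_3$. Taking $L=\tilde f^{\le D}/\norm{\tilde f^{\le D}}_2$, a function of degree $\le D$ and $2$-norm $1$, we get $\card{\inner{f}{L\cdot\chi^{\star}\circ\sigma}}=\card{\inner{\tilde f}{L}}=\norm{\tilde f^{\le D}}_2\ggg \eps\sqrt{\eps_3}$, so $\chi'=\chi^{\star}$, $d=D$, and $\eps'$ as above work; tracking the losses (exponential in $r_0$, polynomial in $1/\rho$ from the DP theorem and the final averaging, with the $\log(1/\eps),\log(1/\delta)$ dependence entering $r_0$) yields $d={\sf poly}_{m,\alpha}(\log(1/\eps)\log(1/\delta)/\rho)$ and $\eps'=2^{-{\sf poly}_{m,\alpha}(\log(1/\eps)\log(1/\delta)/\rho)}$.

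\textbf{The main obstacle} is Step 1: showing that the locally optimal characters --- defined a priori only on a $\delta$-fraction of restrictions, and possibly depending on the fixing of the dead coordinates --- assemble into something that genuinely passes a direct product test. This needs (i) a careful choice of $\rho',\alpha',\beta'$ so the unavoidable contraction on the non-sampled part of the overlap is absorbed, (ii) a densification step restricting to a sub-distribution of restrictions on which goodness is frequent, and (iii) the measure bookkeeping between $\mu$, $\nu$ and $\nu'$ in both the two-query consistency computation and the Efron--Stein averaging of Step 2 (the latter is clean only once all coordinates are treated under a common per-coordinate measure, which is arranged by an extra restriction as in Section~\ref{sec:random_restrictions}). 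Step 2 is otherwise a routine combination of Theorem~\ref{thm:dp_intro} with standard Fourier estimates, the only real subtlety being the quantitative tracking.
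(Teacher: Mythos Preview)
Your high-level plan is exactly right, and in fact the paper's proof of this particular statement is a one-liner: Fact~\ref{fact:F_are_class} shows that $\mathcal{F}_\sigma(\Sigma)=\{\chi\circ\sigma:\chi\in\hat H^n\}$ is a $\tau$-separated class of product functions (with $\tau=\tau(m,\alpha,r)$), and then the general restriction inverse theorem (Theorem~\ref{thm:restriction_inverse}) is applied as a black box. What you have written is essentially a sketch of the proof of Theorem~\ref{thm:restriction_inverse} itself, specialized to this class.

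Comparing your sketch to the paper's proof of Theorem~\ref{thm:restriction_inverse}: your Step~2 is close in spirit to Section~\ref{sec:proof_of_rest_inverse}'s endgame, though the paper routes through noise stability (${\sf Stab}_{1/2}$ of $p_g\cdot f$ after one more layer of restrictions, then Lemma~\ref{lem:op_comparison_lemma} and Fact~\ref{fact:stability_to_weight}) rather than your direct Efron--Stein averaging; this is precisely to avoid the $\mu$-versus-$\nu,\nu'$ bookkeeping you flag as an obstacle. Your Step~1, however, is where the paper's argument differs most from your sketch. The paper does \emph{not} get consistency of local characters from a single Cauchy--Schwarz / second-moment computation between two overlapping restrictions. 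Instead it: (i) performs a nested restriction $I'\subseteq_{1/2}I$ so that the alive set shrinks while the fixing stays the same, and uses Fact~\ref{fact:holder_trick} (H\"older) to get a four-tuple event~\eqref{eq:restrict_inverse_2}; (ii) introduces the sets $W_{I',T}$ of product functions that occur in the short list of $f_{\overline{I'}\to z}$ for a $\zeta$-fraction of fixings $z$, and uses a small-set-expansion argument (Claim~\ref{claim:use_sse_inverse_restriction}, via the level-$d$ inequality) to upgrade ``$p|_{I'\cap T}$ lies in two short lists with correlated fixings'' to ``$p|_{I'\cap T}\in W_{I',T}$''; only then (iii) defines a randomized direct-product strategy and shows it passes ${\sf DP}(\rho,1/2,\beta)$ (Claims~\ref{claim:dp_passes_inverse},~\ref{claim:dp_not_get_from_not_good}). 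The ``majority over $\tilde x$'' shortcut you propose does not by itself explain why two \emph{independent} fixings would yield consistent characters on the overlap---that is exactly what the nested-restriction-plus-SSE machinery is buying. You correctly identify this as the main obstacle; it is genuinely where the work lies, and the mechanism is more delicate than a second-moment computation.
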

\begin{proof}
  Deferred to Section~\ref{sec:rest_inverse}.
\end{proof}

\subsection{Softly Truncating the High Non-embedding Parts: Applying Theorem~\ref{thm:nonembed_deg_must_be_small}}
We first need a slightly stronger variant of Theorem~\ref{thm:nonembed_deg_must_be_small} (which nevertheless follows from it
almost immediately), in which the condition that the support of $\mu_{y,z}$ is full is omitted.
\begin{thm}\label{thm:nonembed_deg_must_be_small_lesscond}
  For all $m\in\mathbb{N}$, $\alpha>0$ there are $M\in\mathbb{N}$, $\delta_0>0$ and $\eta>0$ such that the following holds for all $0< \delta\leq \delta_0$.
  Suppose that $\mu$ is a distribution over $\Sigma\times \Gamma\times \Phi$ satisfying:
  \begin{enumerate}
    \item The probability of each atom is at least $\alpha$.
    \item The size of each one of $\Sigma,\Gamma,\Phi$ is at most $m$.
    \item ${\sf supp}(\mu)$ is pairwise connected.
    \item There are master embeddings $\sigma,\gamma,\phi$ for $\mu$ into an Abelian group $(H,+)$ that are saturated,
    and the distribution of $(\sigma(x),\gamma(y),\phi(z))$ where $(x,y,z)\sim \mu$ has full support on $\{(a,b,c)\in H^3~|~a+b+c = 0\}$.
  \end{enumerate}
  Then, if $f\colon\Sigma^n\to \mathbb{C}$, $g\colon\Gamma^n\to \mathbb{C}$ and $h\colon\Phi^n\to \mathbb{C}$ are $1$-bounded functions such that
  ${\sf NEStab}_{1-\delta}(g;\mu_y^{\otimes n})\leq \delta$,
  then
  \[
  \card{\Expect{(x,y,z)\sim \mu^{\otimes n}}{f(x)g(y)h(z)}}\leq M\delta^{\eta}.
  \]
\end{thm}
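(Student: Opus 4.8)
The plan is to deduce this statement from Theorem~\ref{thm:nonembed_deg_must_be_small} by applying the path trick with respect to $x$ enough times that the support on the last two coordinates becomes full, checking along the way that every other hypothesis --- most notably the non-embedding stability bound on $g$ --- survives the transformation.

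First I would fix $t = t(m,\alpha)\in\mathbb{N}$ large enough, set $\ell = 2^t - 1$, and let $\mu_{\ell}$ be the path trick distribution obtained from $\mu$ with respect to $x$; it is a distribution over $\Sigma'\times\Gamma\times\Phi$ with $\Sigma'\subseteq\Sigma^{\ell}$. By Lemma~\ref{lem:path_keeps_connectedness}, for $\ell$ large enough (depending only on the alphabet sizes of $\mu$, hence only on $m$) $\mu_{\ell}$ is pairwise connected and the bipartite graph between $\Gamma$ and $\Phi$ is complete, so the support of $(\mu_{\ell})_{y,z}$ is full. The probability of each atom of $\mu_{\ell}$ is bounded below by some $\alpha' = \alpha'(m,\alpha) > 0$ (a standard consequence of the fact that $\mu_{\ell}$ arises from $\mu$ by $O(\ell)$ successive conditionings on already-sampled coordinates, each of which costs a multiplicative factor of at most $\alpha$), and the alphabet sizes are bounded by $m^{\ell}$, a function of $m$. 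For the master-embedding hypotheses: by Lemma~\ref{lem:master_stays}, $(\sigma^{\sharp\ell},\gamma,\phi)$ is a master embedding of $\mu_{\ell}$; since ${\sf Image}(\sigma^{\sharp\ell})\supseteq{\sf Image}(\sigma) = H$ it is saturated, and $\gamma,\phi$ are unchanged hence still saturated. Finally, the trivial back-and-forth paths along a single edge $(x,y^{\star},z^{\star})\in{\sf supp}(\mu)$ show that $((x,\ldots,x),y^{\star},z^{\star})\in{\sf supp}(\mu_{\ell})$ with $\sigma^{\sharp\ell}(x,\ldots,x) = \sigma(x)$ (using that $\ell$ is odd), so the distribution of $(\sigma^{\sharp\ell}(\vec{x}),\gamma(y),\phi(z))$ under $\mu_{\ell}$ still has full support on $\{(a,b,c)\in H^3 : a+b+c = 0\}$. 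Thus $\mu_{\ell}$ satisfies all four hypotheses of Theorem~\ref{thm:nonembed_deg_must_be_small}.

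The point requiring the most care --- the one genuine step, though it is not difficult --- is that the hypothesis ${\sf NEStab}_{1-\delta}(g;\mu_y^{\otimes n})\leq\delta$ transfers unchanged. Because we applied the path trick with respect to $x$, the $y$-alphabet, the marginal $\mu_y$, and the master-embedding component $\gamma$ are all identical for $\mu$ and $\mu_{\ell}$, so the non-embedding noise operator acting on functions of $y$ is literally the same operator; hence ${\sf NEStab}_{1-\delta}(g;(\mu_{\ell})_y^{\otimes n}) = {\sf NEStab}_{1-\delta}(g;\mu_y^{\otimes n})\leq\delta$. Now invoke Lemma~\ref{lem:from_mu_to_path}: with $F(\vec{x}) = f(x_{2^t-1})\prod_{j=1}^{2^{t-1}-1}f(x_{2j-1})\overline{f(x_{2j})}$ and $h'(z) = \Expect{(x',y',z')\sim\mu,\, z'=z}{\overline{f(x')}\,\overline{g(y')}}$, both $1$-bounded, one has
\[
\card{\Expect{(x,y,z)\sim\mu^{\otimes n}}{f(x)g(y)h(z)}}^{2^t} \leq \card{\Expect{(\vec{x},y,z)\sim\mu_{\ell}^{\otimes n}}{F(\vec{x})g(y)h'(z)}}.
\]
Applying Theorem~\ref{thm:nonembed_deg_must_be_small} to $\mu_{\ell}, F, g, h'$ bounds the right-hand side by $M\delta^{\eta}$ for $M,\eta,\delta_0$ depending only on $m$ and $\alpha$, valid for $\delta\leq\delta_0$; taking $2^t$-th roots gives $\card{\Expect{(x,y,z)\sim\mu^{\otimes n}}{f(x)g(y)h(z)}}\leq M^{1/2^t}\delta^{\eta/2^t}$, which is of the claimed form with constants depending only on $m$ and $\alpha$. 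No obstacle arises beyond this bookkeeping: the entire content of the weaker hypothesis is absorbed by the ``improving connectivity'' feature of the path trick, while the ``preserving properties of $\mu$'' and ``$3$-wise correlation relations'' features handle everything else.
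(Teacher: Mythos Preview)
Your proposal is correct and follows essentially the same approach as the paper: apply the path trick with respect to $x$ to make the $(y,z)$-support full, verify that all hypotheses of Theorem~\ref{thm:nonembed_deg_must_be_small} are preserved (in particular that the $y$-marginal and $\gamma$ are unchanged so the ${\sf NEStab}$ bound on $g$ transfers), then invoke Lemma~\ref{lem:from_mu_to_path} and Theorem~\ref{thm:nonembed_deg_must_be_small}. Your write-up is in fact more careful than the paper's in tracking the replacement of $h$ by $h'$ and in verifying the master-embedding condition for $\mu_\ell$.
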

\begin{proof}
    Take $t$ to be sufficiently large constant. By Lemma~\ref{lem:from_mu_to_path} we have that
    \[
    \card{\Expect{(x,y,z)\sim \mu^{\otimes n}}{f(x)g(y)h(z)}}^{2^t}
    \leq
    \card{\Expect{(x,y,z)\sim \mu_{2^t}^{\otimes n}}{F(x)g(y)h(z)}},
    \]
    where $\mu_{2^t}$ is the distribution $\mu$ after applying the path trick $t$ times.
    By Lemma~\ref{lem:master_stays}, we get that the distribution $\mu_{2^t}$ satisfies the fourth item in Theorem~\ref{thm:nonembed_deg_must_be_small}, and it is clear that it satisfies the first and second items
    therein as well (with a different $m'$ and $\alpha'$ that only depend on $m,\alpha$ and $t$). Finally,
    by Lemma~\ref{lem:path_keeps_connectedness} the marginal distribution of $\mu_{2^t}$ on $y,z$ has full
    support provided that $t$ is large enough, and
    \[
    {\sf NEStab}_{1-\delta}(g;{\mu_{2^t}}_y^{\otimes n})
    ={\sf NEStab}_{1-\delta}(g;{\mu}_y^{\otimes n})
    \leq \delta
    \]
    as the distributions $\mu_{2^t}$ and $\mu$ have the same marginal distribution over $y$.
    The proof is concluded by appealing to Theorem~\ref{thm:nonembed_deg_must_be_small}.
\end{proof}

The following lemma applies Theorem~\ref{thm:nonembed_deg_must_be_small_lesscond} along with a standard replacement argument to
truncate the high non-embedding degree parts of the functions $f,g$ and $h$ as explained above.
\begin{lemma}\label{lem:truncate_high_nonembed}
  For all $m\in\mathbb{N}$ and $\alpha>0$ there is $\xi>0$ such that the following holds for sufficiently small $\eps>0$.
  Let $\mu$ be a distribution as in Theorem~\ref{thm:main_stab_3_saturated}, and let
  $f\colon\Sigma^{n}\to\mathbb{C}$, $g\colon\Gamma^{n}\to\mathbb{C}$, $h\colon\Phi^n\to\mathbb{C}$
  be $1$-bounded functions such that
  \[
        \card{\Expect{(x,y,z)\sim \mu^{\otimes n}}{f(x)g(y)h(z)}}\geq \eps.
  \]
  Then for $\delta = (\eps/20)^{3/\xi}$ we have
  \[
        \card{\Expect{(x,y,z)\sim \mu^{\otimes n}}
        {\mathrm{T}_{\text{non-embed}, 1-\delta}f(x) \mathrm{T}_{\text{non-embed}, 1-\delta}g(y) \mathrm{T}_{\text{non-embed}, 1-\delta}h(z)}}
        \geq \frac{\eps}{2}.
  \]
\end{lemma}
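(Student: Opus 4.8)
The plan is to run a telescoping replacement argument and bound the resulting error terms using Theorem~\ref{thm:nonembed_deg_must_be_small_lesscond}. Write $f' = \mathrm{T}_{\text{non-embed}, 1-\delta}f$, $g' = \mathrm{T}_{\text{non-embed}, 1-\delta}g$ and $h' = \mathrm{T}_{\text{non-embed}, 1-\delta}h$; since $\mathrm{T}_{\text{non-embed}, 1-\delta}$ is an averaging operator, each of $f',g',h'$ is again $1$-bounded. Telescoping gives
\[
\Expect{(x,y,z)\sim \mu^{\otimes n}}{fgh} - \Expect{(x,y,z)\sim \mu^{\otimes n}}{f'g'h'}
= \Expect{(x,y,z)\sim \mu^{\otimes n}}{(f-f')gh} + \Expect{(x,y,z)\sim \mu^{\otimes n}}{f'(g-g')h} + \Expect{(x,y,z)\sim \mu^{\otimes n}}{f'g'(h-h')},
\]
so it suffices to bound each of the three terms on the right by $\eps/6$; together with the hypothesis $\card{\Expect{(x,y,z)\sim\mu^{\otimes n}}{fgh}}\geq \eps$ this forces $\card{\Expect{(x,y,z)\sim\mu^{\otimes n}}{f'g'h'}}\geq \eps/2$.

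The first thing to establish is that the ``error factors'' $f - f' = (I - \mathrm{T}_{\text{non-embed},1-\delta})f$ (and likewise $g-g'$ and $h-h'$) have tiny non-embedding noise stability. Expanding $f$ in the monomial basis $(B_1\cup B_2)^{\otimes n}$ and applying Fact~\ref{fact:soft_nonbembed_op} coordinatewise, for any $\delta'>0$ one has
\[
{\sf NEStab}_{1-\delta'}(f-f';\mu_x^{\otimes n})
= \sum_{u\in (B_1\cup B_2)^{\otimes n}}\card{\widehat{f}(u)}^2\bigl(1-(1-\delta)^{{\sf nedeg}(u)}\bigr)^2(1-\delta')^{{\sf nedeg}(u)}
\leq \sup_{d\geq 0}\bigl(1-(1-\delta)^d\bigr)^2(1-\delta')^d.
\]
Splitting this supremum at a threshold $D\asymp \frac{1}{\delta'}\log(1/\delta')$, using $1-(1-\delta)^d\leq \delta d$ for $d\leq D$ and $(1-\delta')^d\leq e^{-\delta' D}$ for $d>D$, one gets ${\sf NEStab}_{1-\delta'}(f-f';\mu_x^{\otimes n})\leq (\delta D)^2 + e^{-\delta' D}$, which is at most $\delta'$ provided $\delta$ is small enough relative to $\delta'$ (it suffices to take $\delta\leq \delta'^{2}$, say). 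The identical estimate holds for $g-g'$ and $h-h'$, since Fact~\ref{fact:soft_nonbembed_op} applies verbatim on the $\Gamma$- and $\Phi$-alphabets.

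With this in hand, I would bound the first error term by applying Theorem~\ref{thm:nonembed_deg_must_be_small_lesscond} to the $1$-bounded triple $\tfrac12(f-f'),\,g,\,h$. The point is that $\mu$ satisfies precisely the hypotheses of that theorem, and those hypotheses (probability of each atom $\geq\alpha$, alphabet sizes $\leq m$, pairwise connectedness, a saturated master embedding with $(\sigma(x),\gamma(y),\phi(z))$ supported on all of $\{a+b+c=0\}$) are symmetric under permuting the three coordinates; hence, after relabelling so that the $x$-coordinate plays the ``middle'' role, the hypothesis of the theorem becomes ${\sf NEStab}_{1-\delta'}(\tfrac12(f-f');\mu_x^{\otimes n})\leq \tfrac14{\sf NEStab}_{1-\delta'}(f-f';\mu_x^{\otimes n})\leq \delta'$, and we conclude $\card{\Expect{(x,y,z)\sim\mu^{\otimes n}}{(f-f')gh}}\leq 2M\delta'^{\eta}$ with $M\in\mathbb{N}$ and $\eta>0$ depending only on $m,\alpha$. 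The second and third terms are handled the same way, with $\tfrac12(g-g')$, respectively $\tfrac12(h-h')$, put in the middle role and the remaining ($1$-bounded) functions $f',h'$, respectively $f',g'$, as the outer functions. Finally I would fix $\delta' = (\eps/C)^{1/\eta}$ for a constant $C=C(m,\alpha)$ large enough that $2M\delta'^{\eta}\leq \eps/6$ and $\delta'\leq \delta_0(m,\alpha)$, and then pick $\xi=\xi(m,\alpha)$ so that $\delta = (\eps/20)^{3/\xi}\leq \delta'^{2}$ for all sufficiently small $\eps$ --- any even smaller choice of $\delta$ is harmless, as it only decreases ${\sf NEStab}_{1-\delta'}(f-f')$.

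The conceptual content is entirely in the two observations above: the elementary stability estimate for $(I-\mathrm{T}_{\text{non-embed},1-\delta})f$, and the coordinate-symmetry of Theorem~\ref{thm:nonembed_deg_must_be_small_lesscond} that lets us treat whichever of the three factors is being ``peeled off'' as the middle function. The only genuinely fiddly part is the parameter bookkeeping at the end --- choosing $\delta'$ (hence $\xi$) so that the three error bounds $2M\delta'^{\eta}$ each fall below $\eps/6$ while $\delta'$ still lies below $\delta_0$ and $\delta$ remains small enough for the stability bound --- but no new idea is needed there.
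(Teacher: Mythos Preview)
Your proof is correct and follows essentially the same approach as the paper: a three-step telescoping replacement, bounding each error term by showing the peeled-off factor $(I-\mathrm{T}_{\text{non-embed},1-\delta})f$ has small non-embedding noise stability and then invoking Theorem~\ref{thm:nonembed_deg_must_be_small_lesscond} via the coordinate symmetry of its hypotheses. The paper's eigenvalue bound is a bit slicker (it takes $\delta'=\delta^{1/3}$ directly and bounds via the operator norm of $\mathrm{T}_{\text{non-embed},1-\delta^{1/3}}(I-\mathrm{T}_{\text{non-embed},1-\delta})$ rather than your explicit threshold split), and it leaves the symmetry of Theorem~\ref{thm:nonembed_deg_must_be_small_lesscond} implicit where you spell it out, but these are cosmetic differences.
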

\begin{proof}
  Denoting $f' = \mathrm{T}_{\text{non-embed}, 1-\delta}f$, $g' = \mathrm{T}_{\text{non-embed}, 1-\delta}g$ and
  $h' = \mathrm{T}_{\text{non-embed}, 1-\delta}h$, we which $f$, $g$ and $h$ to $f'$, $g'$ and $h'$ one step at
  a time. First, we claim that provided that $\xi$ is small enough it holds that
  \[
    \card{\Expect{(x,y,z)\sim \mu^{\otimes n}}{(f-f')(x)g(y)h(z)}}\leq \frac{\eps}{10}.
  \]
  Indeed, defining $f'' = \frac{1}{2}(I-\mathrm{T}_{\text{non-embed}, 1-\delta})f$, we get that $f''$ is $1$-bounded
  and
  \[
  {\sf NEStab}_{1-\delta^{1/3}}(f'')
  \leq \norm{\mathrm{T}_{\text{non-embed}, 1-\delta^{1/3}}(I-\mathrm{T}_{\text{non-embed}, 1-\delta})f}_2,
  \]
  which is at most the largest eigenvalue of $\mathrm{T}_{\text{non-embed}, 1-\delta^{1/3}}(I-\mathrm{T}_{\text{non-embed}, 1-\delta})$.
  This eigenvalue can be easily seen to be at most $\delta^{1/3}$, so ${\sf NEStab}_{1-\delta^{1/3}}(f'')\leq \delta^{1/3}$.
  Applying Theorem~\ref{thm:nonembed_deg_must_be_small_lesscond} we get that
  \[
    \card{\Expect{(x,y,z)\sim \mu^{\otimes n}}{f''(x)g(y)h(z)}}\leq \delta^{\xi/3}\leq \frac{\eps}{20},
  \]
  Thus $\card{\Expect{(x,y,z)\sim \mu^{\otimes n}}{(f-f')(x)g(y)h(z)}}\leq \frac{\eps}{10}$ and by the triangle
  inequality
  \[
  \card{\Expect{(x,y,z)\sim \mu^{\otimes n}}{f'(x)g(y)h(z)}}\geq \frac{9\eps}{10}.
  \]
  Continuing in this way, we apply the same argument to replace $g$ into $g'$ and $h$ into $h'$ to get that
  \[
  \card{\Expect{(x,y,z)\sim \mu^{\otimes n}}{f'(x)g'(y)h'(z)}}\geq \frac{7\eps}{10},
  \]
  and the proof is concluded.
\end{proof}

\subsection{Reducing the Non-embedding Degree to be $0$}
Let $f,g,h$ be as in the setting of Lemma~\ref{lem:truncate_high_nonembed}, and take
\[
f' = \mathrm{T}_{\text{non-embed}, 1-\delta}f,
\qquad
g' = \mathrm{T}_{\text{non-embed}, 1-\delta}g,
\qquad
h' = \mathrm{T}_{\text{non-embed}, 1-\delta}h.
\]
Choose $I\subseteq_{\kappa} [n]$ where $\kappa = \frac{\eps^3\delta}{100^3}$, sample
$(\tilde{x},\tilde{y},\tilde{z})\sim \mu^{\overline{I}}$ and define
\[
f'' = f'_{\overline{I}\rightarrow \tilde{x}},
\qquad
g'' = g'_{\overline{I}\rightarrow \tilde{y}},
\qquad
h'' = h'_{\overline{I}\rightarrow \tilde{z}}.
\]
Also, let
\[
f^{\sharp} = \mathrm{T}_{\text{non-embed}, 0}f'',
\qquad
g^{\sharp} = \mathrm{T}_{\text{non-embed}, 0}g'',
\qquad
h^{\sharp} = \mathrm{T}_{\text{non-embed}, 0}h''
\]
in other words, $f^{\sharp}$ is the part of $f''$ of non-embedding degree $0$,
$g^{\sharp}$ is the part of $g''$ of non-embedding degree $0$ and $h^{\sharp}$ is the part of $h''$ of non-embedding degree $0$.

\begin{claim}\label{claim:hastad_arg_rr1}
  In the above setting,
  \[
  \Expect{I,\tilde{x}}{\norm{f'' - f^{\sharp}}_2^2}\leq \frac{\kappa}{\delta},
  \qquad
  \Expect{I,\tilde{y}}{\norm{g'' - g^{\sharp}}_2^2}\leq \frac{\kappa}{\delta},
  \qquad
  \Expect{I,\tilde{z}}{\norm{h'' - h^{\sharp}}_2^2}\leq \frac{\kappa}{\delta}.
  \]
\end{claim}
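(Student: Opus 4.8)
The plan is to prove the three estimates separately; since the three are symmetric, it suffices to bound $\Expect{I,\tilde x}{\norm{f'' - f^\sharp}_2^2}$, the arguments for $g$ and $h$ being word-for-word identical (with $\Gamma,\gamma$ and $\Phi,\phi$ in place of $\Sigma,\sigma$). The key observation is that, because $\mathrm{T}_{\text{non-embed},0}$ multiplies a monomial of non-embedding degree $d$ by $0^d$ (Fact~\ref{fact:soft_nonbembed_op}), $f^\sharp = \mathrm{T}_{\text{non-embed},0}f''$ is exactly the orthogonal projection of $f''$ onto the span of monomials over $\Sigma^I$ of non-embedding degree $0$, so $f'' - f^\sharp$ is the projection onto monomials of non-embedding degree $\geq 1$. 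Thus $\norm{f'' - f^\sharp}_2^2 = \sum_{\eta:\, {\sf nedeg}(\eta)\geq 1}|\widehat{f''}(\eta)|^2$.

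Next I would carry out the standard measure-preserving random-restriction computation. Write $f = \sum_\chi \widehat f(\chi)\chi$ in the tensor basis $(B_1\cup B_2)^{\otimes n}$, so $\sum_\chi|\widehat f(\chi)|^2 = \norm f_2^2 \leq 1$ (as $f$ is $1$-bounded), and by Fact~\ref{fact:soft_nonbembed_op} $\widehat{f'}(\chi) = (1-\delta)^{{\sf nedeg}(\chi)}\widehat f(\chi)$. Since $\tilde x\sim\mu_x^{\overline I}$ and $f''$ is viewed as a function on $(\Sigma^I,\mu_x^I)$, expanding $\widehat{f''}(\eta) = \sum_{\chi:\,\chi|_I = \eta}\widehat{f'}(\chi)\prod_{i\in\overline I}\chi_i(\tilde x_i)$ and taking $\Expect{\tilde x}{\cdot}$ of the square, all cross terms vanish because $\Expect{\tilde x_i}{\chi_i(\tilde x_i)\overline{\chi_i'(\tilde x_i)}} = 1_{\chi_i = \chi_i'}$ for $\chi_i,\chi_i'$ in the orthonormal basis; since the surviving pairs also agree on $I$, this forces $\chi = \chi'$, giving $\Expect{\tilde x}{|\widehat{f''}(\eta)|^2} = \sum_{\chi:\,\chi|_I = \eta}|\widehat{f'}(\chi)|^2$. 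Summing over $\eta$ with ${\sf nedeg}(\eta)\geq 1$ and then over $I\subseteq_\kappa[n]$,
\[
\Expect{I,\tilde x}{\norm{f'' - f^\sharp}_2^2}
= \sum_\chi |\widehat{f'}(\chi)|^2\,\Prob{I\subseteq_\kappa[n]}{{\sf nedeg}(\chi|_I)\geq 1}
= \sum_\chi |\widehat{f'}(\chi)|^2\bigl(1-(1-\kappa)^{{\sf nedeg}(\chi)}\bigr).
\]

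Finally, bounding $1-(1-\kappa)^{{\sf nedeg}(\chi)}\leq \kappa\,{\sf nedeg}(\chi)$ and substituting $\widehat{f'}(\chi) = (1-\delta)^{{\sf nedeg}(\chi)}\widehat f(\chi)$,
\[
\Expect{I,\tilde x}{\norm{f'' - f^\sharp}_2^2}
\leq \kappa\sum_\chi {\sf nedeg}(\chi)(1-\delta)^{2{\sf nedeg}(\chi)}|\widehat f(\chi)|^2
\leq \kappa\Bigl(\sup_{d\geq 0} d(1-\delta)^{2d}\Bigr)\norm f_2^2,
\]
and since $d(1-\delta)^{2d}\leq d\,e^{-2\delta d}\leq \frac{1}{2e\delta} < \frac1\delta$ and $\norm f_2^2\leq 1$, we get $\Expect{I,\tilde x}{\norm{f'' - f^\sharp}_2^2} \leq \kappa/\delta$, as claimed. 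There is essentially no obstacle here: the only step deserving care is the vanishing of the cross terms in the random-restriction identity, which is a routine consequence of the orthonormality of $B_1\cup B_2$ (equivalently $B_{\sf embed}\cup B_{\sf non-embed}\cup B_{\sf modest}$), and everything else is a one-line estimate.
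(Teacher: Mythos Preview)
Your proof is correct and follows essentially the same approach as the paper's: both identify $f''-f^\sharp$ as the projection of $f''$ onto monomials of positive non-embedding degree, expand in the tensor basis, use orthonormality to kill cross terms when averaging over $\tilde x$, bound $\Prob{I}{{\sf nedeg}(\chi|_I)\geq 1}\leq \kappa\,{\sf nedeg}(\chi)$, and then bound $\sup_d \kappa\, d(1-\delta)^{2d}$ by $\kappa/\delta$. The only cosmetic difference is in the last step, where you use $d\,e^{-2\delta d}\leq \tfrac{1}{2e\delta}$ while the paper notes the maximizer satisfies $d\leq 1/\delta$ and $(1-\delta)^{2d}\leq 1$; both yield the claimed bound.
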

\begin{proof}
  We prove the first inequality, and the other two are analogous.

  For a fixed $I$ and $\tilde{x}$, the norm $\norm{f'' - f^{\sharp}}_2^2$ is the mass of $f''$ on monomials of
  non-embedding degree at least $1$. Expanding
  \[
  f'(x) = \sum\limits_{v\in (B_{{\sf embed}}\cup B_{{\sf non-embed}}\cup B_{{\sf modest}})^{n}}\widehat{f'}(v)v(x),
  \]
  we have that
  \[
  (f'' - f^{\sharp})(x)
  =\sum\limits_{u\in (B_{{\sf embed}}\cup B_{{\sf non-embed}}\cup B_{{\sf modest}})^{I}}\left(\sum\limits_{v\in (B_{{\sf embed}}\cup B_{{\sf non-embed}}\cup B_{{\sf modest}})^{\overline{I}}}
  \widehat{f'}(v,u)v(\tilde{x})\right)u(x).
  \]
  Thus, for fixed $I$ we have
  \begin{align*}
  \Expect{\tilde{x}}{\norm{f'' - f^{\sharp}}_2^2}
  &=\sum\limits_{u\in (B_{{\sf embed}}\cup B_{{\sf non-embed}}\cup B_{{\sf modest}})^{I}}
  \Expect{\tilde{x}}{\left(\sum\limits_{v}\widehat{f'}(u,v)v(\tilde{x})\right)^2}1_{{\sf nedeg}(u)\geq 1}\\
  &=\sum\limits_{u\in (B_{{\sf embed}}\cup B_{{\sf non-embed}}\cup B_{{\sf modest}})^{I}}
  \sum\limits_{v}\widehat{f'}(v,u)^21_{{\sf nedeg}(u)\geq 1}\\
  &=\sum\limits_{v\in (B_{{\sf embed}}\cup B_{{\sf non-embed}}\cup B_{{\sf modest}})^{n}}
  \widehat{f'}(v)^21_{{\sf nedeg}(v|_{I})\geq 1}\\
  &=\sum\limits_{v\in (B_{{\sf embed}}\cup B_{{\sf non-embed}}\cup B_{{\sf modest}})^{n}}
  \widehat{f}(v)^2(1-\delta)^{2{\sf nedeg}(v)}1_{{\sf nedeg}(v|_{I})\geq 1}
  \end{align*}
  Taking expectation over $I$ gives
  \begin{align*}
  \Expect{I, \tilde{x}}{\norm{f'' - f^{\sharp}}_2^2}
  &\leq
  \sum\limits_{v\in (B_{{\sf embed}}\cup B_{{\sf non-embed}}\cup B_{{\sf modest}})^{n}}
  \widehat{f}(v)^2(1-\delta)^{2{\sf nedeg}(v)}\Expect{I}{1_{{\sf nedeg}(v|_{I})\geq 1}}\\
  &\leq
  \sum\limits_{v\in (B_{{\sf embed}}\cup B_{{\sf non-embed}}\cup B_{{\sf modest}})^{n}}
  \widehat{f}(v)^2(1-\delta)^{2{\sf nedeg}(v)}\kappa\cdot {\sf nedeg}(v),
  \end{align*}
  where we used the union bound. Looking at the function $P\colon [0,\infty)\to [0,\infty)$ defined by
  $P(s) = \kappa s(1-\delta)^{2s}$, its maximum is achieved at $s = \frac{1}{2\log(1/(1-\delta))}\leq \frac{1}{\delta}$
  and hence is at most $\frac{\kappa}{\delta}$.
\end{proof}

Next we analyze the triple correlation of $f''$, $g''$ and $h''$ and show that with noticeable probability it remains significant.
\begin{claim}\label{claim:Abelian_arg_2}
  In the above setting
  \[
        \Prob{I, \tilde{x}, \tilde{y}, \tilde{z}}{\card{\Expect{(x,y,z)\sim\mu^{I}}{f''(x)g''(y)h''(z)}}\geq \frac{\eps}{4}}\geq \frac{\eps}{4}.
  \]
\end{claim}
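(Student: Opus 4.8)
The plan is to use the fact that the random restriction employed here is of the measure-preserving type (Section~\ref{sec:random_restrictions}), so that averaging the restricted $3$-wise correlation over the choice of restriction recovers exactly the $3$-wise correlation of $f'$, $g'$ and $h'$; the claim then follows from a one-line Markov-type averaging argument.

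First I would record that $f' = \mathrm{T}_{\text{non-embed}, 1-\delta}f$ is $1$-bounded, being a pointwise convex combination of values of the $1$-bounded function $f$, and likewise for $g'$ and $h'$; hence the restricted functions $f'' = f'_{\overline{I}\rightarrow \tilde{x}}$, $g'' = g'_{\overline{I}\rightarrow \tilde{y}}$, $h'' = h'_{\overline{I}\rightarrow \tilde{z}}$ are $1$-bounded as well, and consequently the quantity $\Psi(I,\tilde{x},\tilde{y},\tilde{z}) := \Expect{(x,y,z)\sim\mu^{I}}{f''(x)g''(y)h''(z)}$ satisfies $\card{\Psi}\leq 1$ pointwise. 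The main point to check is the identity
\[
\Expect{I,\tilde{x},\tilde{y},\tilde{z}}{\Psi(I,\tilde{x},\tilde{y},\tilde{z})}
= \Expect{(x,y,z)\sim\mu^{\otimes n}}{f'(x)g'(y)h'(z)}.
\]
This holds because the procedure --- choose $I\subseteq_{\kappa}[n]$, sample $(\tilde{x},\tilde{y},\tilde{z})\sim\mu^{\overline{I}}$, then $(x,y,z)\sim\mu^{I}$, and reassemble --- produces a point distributed exactly according to $\mu^{\otimes n}$, since $\mu^{\otimes n}$ is a product over coordinates and $f$, $g$, $h$ act on the separate coordinates of the $\mu$-atoms; the reassembled inputs to $f'$, $g'$, $h'$ are precisely $(x_I, \tilde{x})$, $(y_I, \tilde{y})$, $(z_I, \tilde{z})$. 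By Lemma~\ref{lem:truncate_high_nonembed}, the right hand side has absolute value at least $\eps/2$.

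Finally I would run the averaging argument. Letting $p := \Prob{I,\tilde{x},\tilde{y},\tilde{z}}{\card{\Psi}\geq \eps/4}$ denote the probability in the statement, and splitting $\Expect{}{\card{\Psi}}$ according to whether $\card{\Psi}\geq \eps/4$ while using $\card{\Psi}\leq 1$ everywhere, we obtain
\[
\frac{\eps}{2}\leq \card{\Expect{}{\Psi}}\leq \Expect{}{\card{\Psi}}\leq p\cdot 1 + \frac{\eps}{4},
\]
so $p\geq \eps/4$, which is exactly the claim. There is no genuine obstacle in this argument; the only bookkeeping requiring attention is the $1$-boundedness of the restricted functions and the measure-preservation of this particular restriction, both immediate from the definitions. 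I also note that the value $\kappa = \eps^3\delta/100^3$ plays no role in this claim --- it is chosen only so that Claim~\ref{claim:hastad_arg_rr1} gives a useful bound --- so the argument works verbatim for any restriction rate $\kappa\in(0,1)$.
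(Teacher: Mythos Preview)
Your proposal is correct and follows essentially the same approach as the paper: define the random variable $\Psi$ (the paper calls it $V$), use the measure-preserving nature of the restriction together with Lemma~\ref{lem:truncate_high_nonembed} to get $\Expect{}{\card{\Psi}}\geq\card{\Expect{}{\Psi}}\geq\eps/2$, and conclude by the one-line averaging argument using $\card{\Psi}\leq 1$. Your write-up is slightly more explicit about the $1$-boundedness of $f',g',h'$ and the measure-preservation, but there is no substantive difference.
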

\begin{proof}
  Consider the real valued random variable $V(I,\tilde{x},\tilde{y},\tilde{z}) = \card{\Expect{(x,y,z)\sim\mu^{I}}{f''(x)g''(y)h''(z)}}$.
  Then by the triangle inequality
  \[
  \Expect{I,\tilde{x},\tilde{y},\tilde{z}}{\card{V}}
  \geq\card{\Expect{I,\tilde{x},\tilde{y},\tilde{z}}{\Expect{(x,y,z)\sim\mu^{I}}{f''(x)g''(y)h''(z)}}}
  = \card{\Expect{(x,y,z)\sim\mu^{n}}{f'(x)g'(y)h'(z)}}
  \geq \frac{\eps}{2},
  \]
  and as $0\leq V\leq 1$ it follows by an averaging argument that $\Prob{}{V\geq \eps/4}\geq \eps/4$, as required.
\end{proof}

We now combine the last two claims to conclude that the triple correlation of $f^{\sharp}$, $g^{\sharp}$ and $h^{\sharp}$ remains significant with noticeable probability.
\begin{lemma}\label{lem:move_to_gp_functions}
  In the above setting
  \[
        \Prob{I, \tilde{x}, \tilde{y}, \tilde{z}}{\card{\Expect{(x,y,z)\sim\mu^{I}}{f^{\sharp}(x)g^{\sharp}(y)h^{\sharp}(z)}}\geq \frac{\eps}{8}}\geq \frac{\eps}{8}.
  \]
\end{lemma}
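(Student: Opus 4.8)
The plan is to combine Claim~\ref{claim:hastad_arg_rr1} and Claim~\ref{claim:Abelian_arg_2} via a union-bound/Markov argument. The point is that $f^{\sharp},g^{\sharp},h^{\sharp}$ are obtained from $f'',g'',h''$ by a small $L_2$-perturbation on average, so whenever the triple correlation of $f'',g'',h''$ is large and simultaneously these perturbations are small, the triple correlation of $f^{\sharp},g^{\sharp},h^{\sharp}$ is also large.

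First I would record the deterministic estimate: since $f'',g'',h''$ are $1$-bounded (being restrictions of averages of $1$-bounded functions), for any fixed $I,\tilde{x},\tilde{y},\tilde{z}$ we have
\[
\card{\Expect{(x,y,z)\sim\mu^{I}}{f''(x)g''(y)h''(z)} - \Expect{(x,y,z)\sim\mu^{I}}{f^{\sharp}(x)g^{\sharp}(y)h^{\sharp}(z)}}
\leq \norm{f''-f^{\sharp}}_2 + \norm{g''-g^{\sharp}}_2 + \norm{h''-h^{\sharp}}_2,
\]
which follows from a standard telescoping argument (replace one function at a time and apply Cauchy--Schwarz, using $1$-boundedness of the remaining factors and of the piece being swapped, noting $\norm{f^{\sharp}}_2\le\norm{f''}_2\le 1$ etc.). Thus on the event that all three of $\norm{f''-f^{\sharp}}_2,\norm{g''-g^{\sharp}}_2,\norm{h''-h^{\sharp}}_2$ are at most $\eps/24$ and the event of Claim~\ref{claim:Abelian_arg_2} holds, the triple correlation of $f^{\sharp},g^{\sharp},h^{\sharp}$ is at least $\eps/4 - 3\cdot\eps/24 = \eps/8$.

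Next I would bound the failure probabilities. By Claim~\ref{claim:hastad_arg_rr1}, $\Expect{I,\tilde x}{\norm{f''-f^{\sharp}}_2^2}\le \kappa/\delta$, and recalling $\kappa = \frac{\eps^3\delta}{100^3}$ we get $\Expect{I,\tilde x}{\norm{f''-f^{\sharp}}_2^2}\le \eps^3/100^3$, so by Markov's inequality $\Prob{}{\norm{f''-f^{\sharp}}_2 > \eps/24}\le \frac{\eps^3/100^3}{(\eps/24)^2} = \frac{24^2}{100^3}\eps < \eps/100$ (for $\eps$ small; in fact this holds for all $\eps\le 1$). The same bound applies to the $g$ and $h$ perturbations. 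By Claim~\ref{claim:Abelian_arg_2}, the event $\card{\Expect{(x,y,z)\sim\mu^{I}}{f''(x)g''(y)h''(z)}}\ge \eps/4$ has probability at least $\eps/4$. Hence by a union bound the good event (all three perturbations small \emph{and} the $f'',g'',h''$ correlation large) has probability at least $\eps/4 - 3\eps/100 > \eps/8$, and on this event the desired inequality for $f^{\sharp},g^{\sharp},h^{\sharp}$ holds, which proves the lemma.

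I do not expect any real obstacle here; the only mild point of care is making sure the constants line up (the choice $\kappa=\eps^3\delta/100^3$ is precisely engineered so that the Markov bounds beat $\eps/4$ with room to spare), and confirming $1$-boundedness of all the intermediate functions so that the telescoping Cauchy--Schwarz estimate is legitimate. Both are routine.
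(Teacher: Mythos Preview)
Your proposal is correct and follows essentially the same approach as the paper: define a good event on which the $f'',g'',h''$ correlation is large and all three $L_2$-perturbations are small, bound the bad event via Markov's inequality using Claim~\ref{claim:hastad_arg_rr1} and the choice $\kappa=\eps^3\delta/100^3$, and telescope to pass from $(f'',g'',h'')$ to $(f^{\sharp},g^{\sharp},h^{\sharp})$. The paper uses the threshold $\norm{f''-f^{\sharp}}_2^2\le 100\kappa/(\eps\delta)=\eps^2/100^2$ rather than your $\eps/24$, and bounds each telescoping step by $\norm{\cdot}_1\le\norm{\cdot}_2$ using $1$-boundedness of the other two factors (note $f^{\sharp},g^{\sharp},h^{\sharp}$ are $1$-bounded as averages of $1$-bounded functions, which is the relevant fact rather than the $\ell_2$ bound you mention in passing), but these are cosmetic differences.
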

\begin{proof}
  Let $E_1$ be the event that $\card{\Expect{(x,y,z)\sim\mu^{I}}{f''(x)g''(y)h''(z)}}\geq \frac{\eps}{4}$
  and let $E_2$ be the event that each one of
  $\norm{f'' - f^{\sharp}}_2^2$, $\norm{g'' - h^{\sharp}}_2^2$ and $\norm{g'' - h^{\sharp}}_2^2$ is at most $\frac{100\kappa}{\eps\delta}$.
  By the Union bound, Claim~\ref{claim:hastad_arg_rr1} and Markov's inequality we have that
  \[
  \Prob{}{E_2}
  =1-\Prob{}{\overline{E_2}}
  \geq 1-3\frac{\eps}{100},
  \]
  so $\Prob{}{E_1\cap E_2}\geq \frac{\eps}{4} - \frac{3\eps}{100}\geq \frac{\eps}{8}$. We argue that when $E_1\cap E_2$ holds, we have that
  \[
  \card{\Expect{(x,y,z)\sim\mu^{I}}{f^{\sharp}(x)g^{\sharp}(y )h^{\sharp}(z)}}\geq \frac{\eps}{8}.
  \]
  To do so, we start with $\card{\Expect{(x,y,z)\sim\mu^{I}}{f''(x)g''(y)h''(z)}}\geq \frac{\eps}{4}$ and replace $f''$, $g''$ and $h''$
  by $f^{\sharp}$, $g^{\sharp}$ and $h^{\sharp}$ one step at a time. Note that by the triangle inequality and $1$-boundedness
  \[
  \card{\Expect{(x,y,z)\sim\mu^{I}}{(f^{\sharp} - f'')(x)g''(y)h''(z)}}
  \leq \norm{f^{\sharp} - f''}_1\leq\norm{f^{\sharp} - f''}_2
  \leq \sqrt{\frac{100\kappa}{\eps\delta}}
  \leq \frac{\eps}{100},
  \]
  hence $\card{\Expect{(x,y,z)\sim\mu^{I}}{f^{\sharp}(x)g''(y)h''(z)}}\geq \frac{\eps}{4} - \frac{\eps}{100}$. Continuing in this way
  we get that
  \[
  \card{\Expect{(x,y,z)\sim\mu^{I}}{f^{\sharp}(x)g^{\sharp}(y )h^{\sharp}(z)}}\geq \frac{\eps}{4} - 3\frac{\eps}{100}\geq \frac{\eps}{8},
  \]
  as required.
\end{proof}

The functions $f^{\sharp}, g^{\sharp}$ and $h^{\sharp}$ are now only functions of $\sigma(x), \gamma(y)$ and $\phi(z)$, and abusing
notations we will also think of them as functions defined over $H^{I}$. As the distribution of $(\sigma(x),\gamma(y),\phi(z))$
is fully supported on $\sett{(a,b,c)\in H^3}{a+b+c=0}$ we expect the $3$-wise correlation
\[
\Expect{(x,y,z)\sim\mu^{I}}{f^{\sharp}(x)g^{\sharp}(y )h^{\sharp}(z)}
=\Expect{(x,y,z)\sim\mu^{I}}{f^{\sharp}(\sigma(x))g^{\sharp}(\gamma(y))h^{\sharp}(\phi(z))}
\]
to be related to the correlations of $f^{\sharp}$, $g^{\sharp}$ and $h^{\sharp}$. Indeed, examples for this include Roth's theorem~\cite{Roth,Meshulam}
as well as the analysis of the Blum-Luby-Rubinfeld linearity test~\cite{BLR,BCHKS,Has01}. However,
as the distribution of $(\sigma(x),\gamma(y),\phi(z))$ is not uniform over $\sett{(a,b,c)\in H^3}{a+b+c=0}$
our case is more closely related to the analysis of biased versions of this test and
the straightforward Fourier-analytic argument does not work. This point was part of our motivation in~\cite{BKMcsp3},
and the argument below generalizes that argument to our setting.

\subsection{Shifting to the Uniform Distribution over an Equation on $H$}
Fix $I$ and $\tilde{x},\tilde{y}$ and $\tilde{z}$ for which the event in Lemma~\ref{lem:move_to_gp_functions} holds.

We may write $\mu = \frac{\alpha}{2} U + \left(1-\frac{\alpha}{2}\right)\mu'$ where $U,\mu'$ are distributions over $\Sigma\times \Gamma\times \Phi$,
and sampling $(x,y,z)\sim U$, the distribution of $(\sigma(x),\gamma(y),\phi(z))$ is uniform over $\sett{(a,b,c)\in H^3}{a+b+c=0}$. We then take $J\subseteq_{\alpha/2} I$,
$(\tilde{x}',\tilde{y}',\tilde{z}')\sim \mu'^{I\setminus J}$ and set
\[
{f^{\sharp}}' = f^{\sharp}_{I\setminus J\rightarrow \tilde{x}'},
\qquad
{g^{\sharp}}' = g^{\sharp}_{I\setminus J\rightarrow \tilde{x}'},
\qquad
{h^{\sharp}}' = h^{\sharp}_{I\setminus J\rightarrow \tilde{x}'}.
\]
The following claim asserts that the triple correlation remains large with noticeable probability after passing to ${f^{\sharp}}'$,
${g^{\sharp}}'$ and ${h^{\sharp}}'$.
\begin{claim}\label{claim:move_to_gp_functions_shift}
  In the above setting
  \[
        \Prob{J, \tilde{x}', \tilde{y}', \tilde{z}'}
        {\card{\Expect{(x,y,z)\sim U^{J}}{{f^{\sharp}}'(x){g^{\sharp}}'(y){h^{\sharp}}'(z)}}\geq \frac{\eps}{16}}\geq \frac{\eps}{16}.
  \]
\end{claim}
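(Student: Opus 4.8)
The plan is to imitate the random restriction / replacement argument that was used in the previous subsection (see Lemma~\ref{lem:move_to_gp_functions}), but now with the goal of changing the underlying measure from $\mu$ to the ``nice'' measure $U$ (under which the master embeddings are distributed uniformly on $\{(a,b,c)\in H^3~|~a+b+c=0\}$). The key point is that this is exactly an instance of the ``restrictions that do not preserve the underlying measure'' discussed in Section~\ref{sec:random_restrictions}: writing $\mu = \frac{\alpha}{2}U + (1-\frac{\alpha}{2})\mu'$, if we sample $J\subseteq_{\alpha/2} I$, then $(\tilde{x}',\tilde{y}',\tilde{z}')\sim {\mu'}^{I\setminus J}$, and then an independent sample from $U^{J}$, the resulting point on the $I$-coordinates is distributed exactly as $\mu^{I}$. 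Consequently, for the fixed $I,\tilde{x},\tilde{y},\tilde{z}$ we have the identity
\[
\Expect{(x,y,z)\sim\mu^{I}}{f^{\sharp}(x)g^{\sharp}(y)h^{\sharp}(z)}
=
\Expect{\substack{J,\ \tilde{x}',\tilde{y}',\tilde{z}'}}
{\Expect{(x,y,z)\sim U^{J}}{{f^{\sharp}}'(x){g^{\sharp}}'(y){h^{\sharp}}'(z)}},
\]
where on the right the inner expectation is the quantity appearing in the claim. Since by the choice of $I,\tilde{x},\tilde{y},\tilde{z}$ the left-hand side has absolute value at least $\eps/8$, the triangle inequality gives that the expectation over $J,\tilde{x}',\tilde{y}',\tilde{z}'$ of the absolute value of the inner expectation is at least $\eps/8$.

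Now I would finish with a routine averaging (Markov-type) argument: the real-valued random variable
\[
V(J,\tilde{x}',\tilde{y}',\tilde{z}')
= \card{\Expect{(x,y,z)\sim U^{J}}{{f^{\sharp}}'(x){g^{\sharp}}'(y){h^{\sharp}}'(z)}}
\]
satisfies $0\le V\le 1$ (because ${f^{\sharp}}',{g^{\sharp}}',{h^{\sharp}}'$ are restrictions of the $1$-bounded functions $f^{\sharp},g^{\sharp},h^{\sharp}$, hence themselves $1$-bounded), and $\Expect{}{V}\ge \eps/8$. Therefore $\Prob{}{V\ge \eps/16}\ge \eps/16$ by the standard averaging argument (if $V<\eps/16$ with probability more than $1-\eps/16$, then $\Expect{}{V}< \eps/16\cdot 1 + 1\cdot \eps/16 < \eps/8$, a contradiction, after optimizing constants; one can also just note $\Expect{}{V}\le \eps/16 + \Prob{}{V\ge \eps/16}$). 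This yields exactly the probability lower bound $\eps/16$ stated in the claim.

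There is essentially no main obstacle here — the step that requires the most care is simply verifying the distributional identity above, i.e.\ that mixing a $J$-restriction fixed according to ${\mu'}^{I\setminus J}$ with a fresh $U^{J}$-sample reproduces $\mu^{I}$ exactly; this is the decomposition $\mu=\frac{\alpha}{2}U+(1-\frac{\alpha}{2})\mu'$ applied coordinatewise and is identical in spirit to the computations in Lemma~\ref{lem:move_to_gp_functions} and Claim~\ref{claim:Abelian_arg_2}. One should also record that the probability of each atom in $U$ and $\mu'$ is bounded below by a constant depending only on $m$ and $\alpha$ (needed for the subsequent Fourier-analytic step), but that is immediate from the construction and is not needed for the claim itself. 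Once Claim~\ref{claim:move_to_gp_functions_shift} is in hand, the next step of the argument will be the standard (biased) linearity-testing Fourier computation on $H^{J}$, in the spirit of~\cite{BKMcsp3}, to extract a large Fourier coefficient of ${f^{\sharp}}'$.
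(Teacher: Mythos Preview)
Your proposal is correct and follows essentially the same approach as the paper's proof: define $V$ as the absolute value of the inner expectation, use the decomposition $\mu=\frac{\alpha}{2}U+(1-\frac{\alpha}{2})\mu'$ and the triangle inequality to get $\Expect{}{V}\geq \eps/8$, then apply the averaging argument with $0\leq V\leq 1$ to conclude $\Prob{}{V\geq \eps/16}\geq \eps/16$. The paper's proof is slightly more terse but identical in substance.
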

\begin{proof}
  Consider the real valued random variable $V(J,\tilde{x}',\tilde{y}',\tilde{z}') = \card{\Expect{(x,y,z)\sim U^{J}}{{f^{\sharp}}'(x){g^{\sharp}}'(y){h^{\sharp}}'(z)}}$.
  Then by the triangle inequality
  \[
  \Expect{J,\tilde{x}',\tilde{y}',\tilde{z}'}{\card{V}}
  \geq\card{\Expect{J,\tilde{x}',\tilde{y}',\tilde{z}'}{\Expect{(x,y,z)\sim U^{J}}{{f^{\sharp}}'(x){g^{\sharp}}'(y){h^{\sharp}}'(z)}}}
  = \card{\Expect{(x,y,z)\sim\mu^{I}}{{f^{\sharp}}(x){g^{\sharp}}(y){h^{\sharp}}(z)}}
  \geq \frac{\eps}{8},
  \]
  and as $0\leq V\leq 1$ it follows by an averaging argument that $\Prob{}{V\geq \eps/16}\geq \eps/16$, as required.
\end{proof}

Next, we have the standard Fourier analytic computation that handles $3$-wise over $U$, showing that they can be significant only when
${f^{\sharp}}'$ has a significant Fourier coefficient.
\begin{claim}\label{claim:move_to_gp_functions_shift_fourier}
  If $J$, $\tilde{x}'$, $\tilde{y}'$ and $\tilde{z}'$ satisfy the event in Claim~\ref{claim:move_to_gp_functions_shift}, then
  there exists $\chi\in\hat{H}^{J}$ such that
  \[
        \card{\widehat{{f^{\sharp}}'}(\chi)}\geq \frac{\eps}{16}.
  \]
\end{claim}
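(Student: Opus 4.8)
The plan is to establish Claim~\ref{claim:move_to_gp_functions_shift_fourier} by the classical Fourier-analytic computation for the Blum--Luby--Rubinfeld type test, exploiting that under the distribution $U$ the triple $(\sigma(x),\gamma(y),\phi(z))$ is uniform over $\sett{(a,b,c)\in H^3}{a+b+c=0}$, so that effectively $c = -a-b$ with $a,b$ independent and uniform over $H$. Since $f^{\sharp}, g^{\sharp}, h^{\sharp}$ (and hence their restrictions ${f^{\sharp}}', {g^{\sharp}}', {h^{\sharp}}'$) have non-embedding degree $0$, they only depend on $\sigma(x), \gamma(y), \phi(z)$ respectively, so we may genuinely regard them as functions $H^{J}\to\mathbb{C}$. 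First I would expand each of these three functions in the Fourier basis of $\widehat{H}^{J}$, writing ${f^{\sharp}}'(a) = \sum_{\chi}\widehat{{f^{\sharp}}'}(\chi)\chi(a)$ and similarly for the other two with characters $\chi'$ and $\chi''$.

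Next I would substitute these expansions into $\Expect{(x,y,z)\sim U^{J}}{{f^{\sharp}}'(x){g^{\sharp}}'(y){h^{\sharp}}'(z)}$, which becomes
\[
\sum_{\chi,\chi',\chi''}\widehat{{f^{\sharp}}'}(\chi)\widehat{{g^{\sharp}}'}(\chi')\widehat{{h^{\sharp}}'}(\chi'')\Expect{a,b\in H^{J}}{\chi(a)\chi'(b)\chi''(-a-b)}.
\]
Using the independence and uniformity of $a$ and $b$, the inner expectation factors as $\Expect{a}{\chi(a)\overline{\chi''(a)}}\Expect{b}{\chi'(b)\overline{\chi''(b)}}$, which is $1$ when $\chi = \chi' = \chi''$ and $0$ otherwise (here I would use that on a coordinate where all three characters agree the expectation is $1$, and where they disagree the character sum vanishes). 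Thus the triple correlation collapses to $\sum_{\chi}\widehat{{f^{\sharp}}'}(\chi)\widehat{{g^{\sharp}}'}(\chi)\widehat{{h^{\sharp}}'}(\chi)$. Taking absolute values and pulling out the maximum Fourier coefficient of ${f^{\sharp}}'$, this is at most $\max_{\chi}\card{\widehat{{f^{\sharp}}'}(\chi)}\cdot\sum_{\chi}\card{\widehat{{g^{\sharp}}'}(\chi)}\card{\widehat{{h^{\sharp}}'}(\chi)}$, and by Cauchy--Schwarz the sum is bounded by $\norm{{g^{\sharp}}'}_2\norm{{h^{\sharp}}'}_2 \leq 1$ since these functions are $1$-bounded (hence of $2$-norm at most $1$). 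Therefore $\card{\Expect{(x,y,z)\sim U^{J}}{{f^{\sharp}}'(x){g^{\sharp}}'(y){h^{\sharp}}'(z)}}\leq \max_{\chi}\card{\widehat{{f^{\sharp}}'}(\chi)}$, and combining with the hypothesis that the left-hand side is at least $\eps/16$ yields the desired $\chi$ with $\card{\widehat{{f^{\sharp}}'}(\chi)}\geq \eps/16$.

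I do not expect a genuine obstacle here, as this is the standard linearity-testing argument and the only subtlety — that the relevant distribution on $H^3$ is \emph{uniform} over the zero-sum slice rather than merely fully supported — has already been arranged by the random restriction passing from $\mu$ to $U$ in the preceding paragraph. The one point requiring a line of care is the identification of ${f^{\sharp}}'$ etc.\ as bona fide functions on $H^{J}$: this is legitimate precisely because they have non-embedding degree $0$, so their value depends only on the master-embedding image of the input, and the relevant marginals are consistent. If one wanted to be careful about normalization, one could equally phrase the conclusion in terms of $\inner{{f^{\sharp}}'}{\chi\circ\sigma}$ rather than $\widehat{{f^{\sharp}}'}(\chi)$; these coincide under the identification. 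This then feeds into the final step of the overall argument, where Theorem~\ref{thm:restriction_inverse_special} is invoked to pull the correlation of a random restriction of $f$ with a character back to a correlation of $f$ itself with a low-degree function times a character.
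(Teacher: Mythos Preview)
Your proposal is correct and matches the paper's own proof essentially line for line: expand in the Fourier basis of $\widehat{H}^{J}$, use that under $U$ the triple $(\sigma(x),\gamma(y),\phi(z))$ is uniform over the zero-sum slice so only the diagonal $\chi=\chi'=\chi''$ survives, then bound the resulting sum $\sum_\chi \widehat{{f^{\sharp}}'}(\chi)\widehat{{g^{\sharp}}'}(\chi)\widehat{{h^{\sharp}}'}(\chi)$ by pulling out $\max_\chi\card{\widehat{{f^{\sharp}}'}(\chi)}$ and applying Cauchy--Schwarz plus Parseval to the remaining sum. The paper phrases the orthogonality step by substituting $\phi(z)=-\sigma(x)-\gamma(y)$ and computing $\Expect{}{(\chi\overline{\chi''})(\sigma(x))(\chi'\overline{\chi''})(\gamma(y))}$, which is exactly your factorization written in slightly different notation.
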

\begin{proof}
  Expand ${f^{\sharp}}'$, ${g^{\sharp}}'$ and ${h^{\sharp}}'$ into Fourier basis over $H$:
  \[
  {f^{\sharp}}'(x) = \sum\limits_{\chi\in\hat{H}^{J}}\widehat{{f^{\sharp}}'}(\chi)\chi(\sigma(x)),
  \qquad
  {g^{\sharp}}'(y) = \sum\limits_{\chi'\in\hat{H}^{J}}\widehat{{g^{\sharp}}'}(\chi')\chi'(\gamma(y)),
  \qquad
  {h^{\sharp}}'(z) = \sum\limits_{\chi''\in\hat{H}^{J}}\widehat{{h^{\sharp}}'}(\chi'')\chi''(\phi(z)),
  \]
  and plug this into the expectation to get
  \[
  \Expect{(x,y,z)\sim U^{J}}{{f^{\sharp}}'(x){g^{\sharp}}'(y){h^{\sharp}}'(z)}\\
  =\sum\limits_{\chi,\chi',\chi''\in\hat{H}^{J}}
  \widehat{{f^{\sharp}}'}(\chi)\widehat{{g^{\sharp}}'}(\chi')\widehat{{h^{\sharp}}'}(\chi'')
  \Expect{(x,y,z)\sim U^{J}}{\chi(\sigma(x))\chi'(\gamma(y))\chi''(\phi(z))}.
  \]
  Using the fact that $\phi(z) = -\sigma(x)-\gamma(y)$ to write the expectation on the right hand side as
  \[
  \Expect{(x,y,z)\sim U^{J}}{(\chi\overline{\chi''})(\sigma(x))(\chi'\overline{\chi''})(\gamma(y))} = 1_{\chi = \chi' = \chi''},
  \]
  we conclude that
  \[
  \card{\Expect{(x,y,z)\sim U^{J}}{{f^{\sharp}}'(x){g^{\sharp}}'(y){h^{\sharp}}'(z)}}
  =\card{\sum\limits_{\chi\in\hat{H}^{J}}
  \widehat{{f^{\sharp}}'}(\chi)\widehat{{g^{\sharp}}'}(\chi)\widehat{{h^{\sharp}}'}(\chi)}
  \leq \max_{\chi}\card{\widehat{{f^{\sharp}}'}(\chi)}\sum\limits_{\chi\in\hat{H}^{J}}\card{\widehat{{g^{\sharp}}'}(\chi)}\card{\widehat{{h^{\sharp}}'}(\chi)},
  \]
  and using the fact that $\sum\limits_{\chi\in\hat{H}^{J}}\card{\widehat{{g^{\sharp}}'}(\chi)}\card{\widehat{{h^{\sharp}}'}(\chi)}\leq 1$ which follows by
  Cauchy-Schwarz and Parseval, we get that
  \[
  \frac{\eps}{16}\leq\card{\Expect{(x,y,z)\sim U^{J}}{{f^{\sharp}}'(x){g^{\sharp}}'(y){h^{\sharp}}'(z)}}\leq \max_{\chi}\card{\widehat{{f^{\sharp}}'}(\chi)},
  \]
  concluding the proof.
\end{proof}

\subsection{Unraveling Restrictions and Averaging: Applying the Restriction Inverse Theorem}
Summarizing, we have shown that after a sequence of restrictions, averaging and further restriction, our function $f$ has correlation with a function of the form
$\chi\circ \sigma$ with noticeable probability. We next unravel this operations to deduce a result about $f$ itself.
\begin{claim}\label{claim:reverse_F_1}
  Suppose that $f^{\sharp}$, $g^{\sharp}$ and $h^{\sharp}$ satisfy the event in Lemma~\ref{lem:move_to_gp_functions}.
  Then there is $\chi\in \hat{H}^{I}$ and an embedding function
  $L\colon \Sigma^{I}\to\mathbb{C}$ with $\norm{L}_2\leq 1$ and degree at most $d$,
  such that
  \[
  \card{\inner{f^{\sharp}}{L\cdot \chi\circ\sigma}}\geq \eps'
  \]
  where $d = {\sf poly}_{m,\alpha}\left(\log(1/\eps)\right)$ and
  $\eps' = 2^{-{\sf poly}_{m,\alpha}\left(\log(1/\eps)\right)}$.
\end{claim}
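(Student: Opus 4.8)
The plan is to reverse the chain of operations established in Claims~\ref{claim:move_to_gp_functions_shift}--\ref{claim:move_to_gp_functions_shift_fourier} and Lemma~\ref{lem:move_to_gp_functions}, which together show that after two nested random restrictions (first passing to coordinates $J \subseteq_\alpha I$ and fixing the rest according to $\mu'$, which shifts the measure to the uniform-over-the-equation measure $U$) the function ${f^{\sharp}}'$ has a Fourier coefficient $\widehat{{f^{\sharp}}'}(\chi)$ of magnitude at least $\eps/16$. The first step I would carry out is to re-read this as a statement that, with probability at least $\eps/16$ over the choice of $J$ and the fixing $(\tilde x',\tilde y',\tilde z')\sim {\mu'}^{I\setminus J}$, the restricted function $f^{\sharp}_{I\setminus J\to\tilde x'}$ has correlation at least $\eps/16$ with some product function of the form $\chi\circ\sigma$ (a character of $H$ composed with $\sigma$, which is precisely a product function on $\Sigma^J$). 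Combined with the event in Lemma~\ref{lem:move_to_gp_functions} holding with probability at least $\eps/8$, a union/averaging bound gives a noticeable (at least $\eps^2/\mathrm{const}$) probability that both events hold simultaneously.

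The second and main step is to invoke Theorem~\ref{thm:restriction_inverse_special}, the restriction inverse theorem, with the domain $\Sigma^I$, the distribution $\mu|_I$, the Abelian group $H$, the map $\sigma$, the decomposition $\mu = \tfrac{\alpha}{2}U + (1-\tfrac{\alpha}{2})\mu'$ (so $\rho = \alpha/2$, $\nu = U$, $\nu' = \mu'$), and the parameters $\eps_{\mathrm{thm}} = \eps/16$ and $\delta_{\mathrm{thm}} = \Omega(\eps^2)$. Its hypothesis is exactly the statement that with probability $\delta_{\mathrm{thm}}$ a $\nu'$-restriction of $f^{\sharp}$ correlates at least $\eps_{\mathrm{thm}}$ with a function $\chi\circ\sigma$, which is what the previous step produced. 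The conclusion is that there exist $\chi\in\hat H^I$ and $L\colon\Sigma^I\to\mathbb{C}$ of degree at most $d$ and $2$-norm at most $1$ such that $\card{\inner{f^{\sharp}}{L\cdot\chi\circ\sigma}}\geq\eps'$, with $d = {\sf poly}_{m,\alpha}(\log(1/\eps))$ and $\eps' = 2^{-{\sf poly}_{m,\alpha}(\log(1/\eps))}$ after tracking the quantitative dependence (since $\log(1/\eps_{\mathrm{thm}}), \log(1/\delta_{\mathrm{thm}})$ and $1/\rho$ are all ${\sf poly}_{m,\alpha}(\log(1/\eps))$, using that $\rho = \alpha/2$ is a constant). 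The only subtlety is to confirm that $L$ may be taken to be an embedding function, i.e.\ a function only of $\sigma$ on each coordinate: since $f^{\sharp}$ itself is an embedding function (it is $\mathrm{T}_{\text{non-embed},0}f''$, hence lies in the span of $B_{{\sf embed}}^{\otimes I}$ by Fact~\ref{fact:soft_nonbembed_op}), and $\chi\circ\sigma$ is also an embedding function, any component of $L$ that is not in ${\sf Embed}_\sigma$ is orthogonal to $\overline{f^{\sharp}}\cdot\overline{\chi\circ\sigma}$ and contributes nothing to the inner product; so one may project $L$ onto the span of embedding monomials of degree at most $d$ without decreasing the correlation, and this projection has $2$-norm at most $1$ and degree at most $d$.

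The step I expect to be the main obstacle is threading the quantitative parameters correctly and making sure the restriction inverse theorem is being applied to the right object: Theorem~\ref{thm:restriction_inverse_special} takes a single random restriction with a measure-change built in, whereas the construction leading here involved a first restriction $I\subseteq_\kappa[n]$ (to collapse the non-embedding degree) followed by the shift restriction $J\subseteq_{\alpha/2}I$. For the present claim we are working with the distribution $\mu|_I$ already fixed, so only the second (shift) restriction is relevant, and that is exactly the form $\mu|_I = \rho U + (1-\rho)\mu'|_I$ demanded by the theorem; one just needs to check that the event ``${f^{\sharp}}'$ has a large Fourier coefficient'' is stable enough — that it persists under the union with the Lemma~\ref{lem:move_to_gp_functions} event — which is routine averaging. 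Once the bookkeeping is in place, the claim follows directly.
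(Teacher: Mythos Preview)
Your approach is essentially the same as the paper's: combine Claims~\ref{claim:move_to_gp_functions_shift} and~\ref{claim:move_to_gp_functions_shift_fourier} to obtain that with probability at least $\eps/16$ over $J,\tilde x'$ the restriction $f^{\sharp}_{I\setminus J\to\tilde x'}$ correlates at least $\eps/16$ with some $\chi\circ\sigma$, then invoke Theorem~\ref{thm:restriction_inverse_special}. One small confusion to clear up: the hypothesis of the claim already \emph{fixes} $I$ and $\tilde x,\tilde y,\tilde z$ so that the event in Lemma~\ref{lem:move_to_gp_functions} holds, so there is no need to intersect with that event or lose an extra factor of $\eps$---Claims~\ref{claim:move_to_gp_functions_shift} and~\ref{claim:move_to_gp_functions_shift_fourier} are already stated conditionally on it. Your additional observation that $L$ may be projected onto embedding monomials (since $f^{\sharp}$ and $\chi\circ\sigma$ are both embedding functions) is correct and supplies a detail the paper's proof glosses over.
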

\begin{proof}
  By Claims~\ref{claim:move_to_gp_functions_shift}~\ref{claim:move_to_gp_functions_shift_fourier},
  we conclude that
  \[
  \Prob{J,\tilde{x}'}
  {\exists \chi\in\hat{H}, ~\card{\inner{{f^{\sharp}}_{I\setminus J\rightarrow\tilde{x}'}}{\chi\circ\sigma}}\geq \frac{\eps}{16}}\geq \frac{\eps}{16},
  \]
  where we think of functions as being defined over $H$. Applying Theorem~\ref{thm:restriction_inverse_special} the conclusion follows.
\end{proof}

Note that by definition of $f^{\sharp}$,
\[
\card{\inner{f^{\sharp}}{L\cdot \chi\circ\sigma}}
=\card{\inner{\mathrm{T}_{\text{non-embed}, 0}f''}{L\cdot \chi\circ\sigma}}
=\card{\inner{f''}{\mathrm{T}_{\text{non-embed}, 0}(L\cdot \chi\circ\sigma)}}
=\card{\inner{f''}{L'\cdot \chi\circ\sigma}},
\]
for $L' = \mathrm{T}_{\text{non-embed}, 0} L$. We the fact that $\mathrm{T}_{\text{non-embed}, 0}$ is Hermitian and that
$\chi\circ\sigma$ is constant on connected components of $\mathrm{T}_{\text{non-embed}, 0}$ (see Fact~\ref{fact:soft_nonbembed_op}).
Note that ${\sf deg}(L')\leq {\sf deg}(L)$ and that $\norm{L'}_2\leq \norm{L}_2\leq 1$.
Thus, we conclude from Claim~\ref{claim:reverse_F_1}
that if $f''$, $g''$ and $h''$ satisfy the event in Claim~\ref{claim:Abelian_arg_2}, then there are $L$ and $\chi$
as in Claim~\ref{claim:reverse_F_1} such that $\card{\inner{f''}{L\cdot \chi\circ\sigma}}\geq \eps'$.

Define $f''' = f'' \overline{\chi\circ \sigma}$, and observe that the above means that
$\card{\inner{f'''}{L}}\geq \eps'$ provided that $f''$, $g''$ and $h''$ satisfy the event in Claim~\ref{claim:Abelian_arg_2}.
Thus, by Cauchy-Schwarz $W_{\leq d}[f''']\geq \card{\inner{f'''}{L}}^2\geq \eps'^2$, and we now
use random restrictions to show that after a suitable restriction, $f'''$ has significant average.

\begin{lemma}\label{lem:rest_to_correlation}
  Suppose that $F\colon (\Sigma^I,\mu^{I})\to\mathbb{C}$ is a $1$-bounded function.
  Then choosing $(I',x')$ by including each $i\in I$ in $I'$ with probability $1/2d$ and sampling $x'\sim \mu^{I\setminus I'}$, we have that
  \[
        \Prob{I',x'}{\card{\E[F_{I\setminus I'\rightarrow x'}]}\geq \sqrt{\frac{W_{\leq d}[F]}{2e}}}\geq \frac{W_{\leq d}[F]}{2e}.
  \]
\end{lemma}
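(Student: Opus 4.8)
The goal is to prove Lemma~\ref{lem:rest_to_correlation}: for a $1$-bounded $F$ with level-$\leq d$ weight $W := W_{\leq d}[F]$, a random restriction keeping each coordinate alive with probability $1/2d$ makes $\card{\E[F_{I\setminus I'\rightarrow x'}]}$ at least $\sqrt{W/2e}$ with probability at least $W/2e$. The plan is a standard second-moment argument on the random variable $Z(I',x') = \E_{y\sim \mu^{I'}}[F_{I\setminus I'\rightarrow x'}(y)]$, namely the average of the restricted function.

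The first step is to identify $Z$ in terms of the Efron--Stein / degree decomposition of $F$. Writing $F = \sum_{S\subseteq I} F^{=S}$, averaging the restricted function over the live coordinates annihilates every Efron--Stein component $F^{=S}$ with $S\not\subseteq I\setminus I'$ (i.e.\ $S\cap I' \neq\emptyset$), and acts as the identity-then-evaluate on the components supported inside $I\setminus I'$. More precisely, $Z(I',x') = \sum_{S: S\cap I' = \emptyset} F^{=S}(x')$, where we view $F^{=S}$ as a function of the frozen coordinates $x'$ (it only depends on coordinates in $S$). Taking expectation over $x'\sim\mu^{I\setminus I'}$ of $\card{Z}^2$, orthogonality of the Efron--Stein decomposition gives $\E_{x'}[\card{Z(I',x')}^2] = \sum_{S: S\cap I' = \emptyset}\norm{F^{=S}}_2^2$. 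Now taking expectation over $I'$, each set $S$ survives (has $S\cap I' =\emptyset$) with probability $(1-\tfrac{1}{2d})^{\card{S}}$, so
\[
\E_{I',x'}[\card{Z}^2] = \sum_{S\subseteq I}\left(1-\frac{1}{2d}\right)^{\card{S}}\norm{F^{=S}}_2^2 \geq \sum_{\card{S}\leq d}\left(1-\frac{1}{2d}\right)^{d}\norm{F^{=S}}_2^2 \geq \frac{1}{e}\, W_{\leq d}[F],
\]
using $(1-\tfrac{1}{2d})^d \geq e^{-1}\cdot(\text{something}) \geq \tfrac{1}{e}$ (indeed $(1-1/2d)^d \to e^{-1/2} > e^{-1}$, and for all $d\geq 1$ it is at least, say, $\tfrac{1}{2}\geq \tfrac1e$; in any case the constant $e$ in the statement is generous). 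The second step is the Paley--Zygmund-type conversion: $Z$ is a bounded random variable with $\card{Z}\leq 1$ pointwise (as $F$ is $1$-bounded), so $\E[\card{Z}^2]\leq \E[\card{Z}]$, hence $\Pr[\card{Z}\geq \tfrac12\sqrt{W/e}]$ is bounded below. A cleaner route: since $\card Z\le 1$ we have $\E[\card Z^2] \le \Pr[\card Z \ge t] + t$, so choosing $t = \tfrac12 \E[\card Z^2] \ge W/2e$ yields $\Pr[\card Z \ge W/2e]\ge W/2e$; but we want the threshold $\sqrt{W/2e}$, which is larger and cleaner, so instead bound $\E[\card Z^2] \le \Pr[\card Z^2 \ge s]\cdot 1 + s$ with $s$ chosen appropriately, giving $\Pr[\card Z^2 \ge \tfrac{1}{2}\E[\card Z^2]] \ge \tfrac12 \E[\card Z^2]$, i.e.\ $\Pr[\card Z \ge \sqrt{W/2e}]\ge W/2e$ after plugging $\E[\card Z^2]\ge W/e$.

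There is essentially no hard part here — the only mild subtleties are (a) correctly verifying that the averaging operator over live coordinates kills exactly the Efron--Stein parts meeting $I'$ and keeps the rest as functions of the frozen coordinates, which follows from the tensor-product structure of $L_2(\Sigma^I;\mu^{\otimes I})$ and the fact that $\E_{y_i\sim\mu}$ applied to a coordinate $i\in S$ of $F^{=S}$ gives $0$ (by definition of $V_{=S}$); and (b) checking the numerical inequality $(1-1/2d)^d\geq 1/e$, which holds for every integer $d\geq 1$ since $(1-1/2d)^{2d}\geq 1/e^{2}\cdot(1-o(1))$ and more simply $(1-1/2d)^d \geq e^{-d\cdot \frac{1/2d}{1-1/2d}} = e^{-\frac{1}{2-1/d}}\geq e^{-1}$. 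Both are routine. I would present the Efron--Stein identity for $\E_{x'}[\card Z^2]$ as the main computation, then take expectations over $I'$, then finish with the one-line Paley--Zygmund argument using boundedness of $Z$.

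One bookkeeping remark to include: the lemma as applied later takes $F = f'''$ which is $1$-bounded (a product of a $1$-bounded $f''$ with the unimodular $\overline{\chi\circ\sigma}$), and $d$ is the degree bound from the restriction inverse theorem, so the parameters $1/2d$ for the restriction probability and the resulting $\sqrt{W_{\leq d}[f''']/2e} \geq \eps'/\sqrt{2e}$ lower bound on the average, holding with probability $\geq \eps'^2/2e$, feed directly into a final application of the restriction inverse theorem (Theorem~\ref{thm:restriction_inverse_special} with the trivial group, or rather its consequence about correlation with constants) to conclude the structural statement for $f$. I would state the lemma's proof self-containedly and defer that downstream usage to the surrounding argument.
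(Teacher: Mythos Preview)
Your proposal is correct and follows essentially the same approach as the paper: compute $\E_{I',x'}\bigl[\card{\E[F_{I\setminus I'\rightarrow x'}]}^2\bigr]$ via the Efron--Stein decomposition to get the lower bound $W_{\leq d}[F]/e$, then use $1$-boundedness of the random variable and a Paley--Zygmund/averaging argument to convert this into the stated probability bound. The paper's proof is terser (it skips the verification of $(1-1/2d)^d\geq 1/e$ and the details of the final averaging step), but the argument is the same.
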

\begin{proof}
  Define the random variable $V_{I',x'} = \E[F_{I\setminus I'\rightarrow x'}]$, and note that
  \[
    \E_{I',x'}[\card{V_{I',x'}}^2] = \Expect{I'}{\Expect{x'\sim\mu^{I\setminus I'}}{\card{\sum\limits_{S\neq \emptyset} F^{=S}(x')1_{S\subseteq I\setminus I'}}^2}}
    =\Expect{I'}{\sum\limits_{S} \norm{F^{=S}}_2^2 1_{S\subseteq I\setminus I'}}
    \geq \frac{1}{e}W_{\leq d}[F].
  \]
  As $\card{V_{I',x'}}\leq 1$ always, it follows that with probability at least $\frac{\xi}{2e}$ we have $\card{V_{I',x'}}\geq \sqrt{\frac{\xi}{2e}}$.
\end{proof}

Applying Lemma~\ref{lem:rest_to_correlation} on $F = f'''$, we conclude that provided that $f''$, $g''$ and $h''$ satisfy the event in Claim~\ref{claim:Abelian_arg_2},
we have that
\[
\Prob{I',x'}{\card{\E[f'''_{I\setminus I'\rightarrow x'}]}\geq \frac{\eps'}{\sqrt{2e}}}\geq \frac{\eps'^2}{2e}.
\]
Noting that $\E[f'''_{I\setminus I'\rightarrow x'}] = \inner{f''_{I\setminus I'\rightarrow x'}}{\chi\circ \sigma|_{I\setminus I'\rightarrow x'}}$,
we conclude that after random restriction $f''$ is correlated with a function $\chi'\circ \sigma$ for $\chi'\in \hat{H}^{I'}$.
Thus, we are now in a position again to apply the restriction inverse theorem and conclude the proof of Theorem~\ref{thm:main_stab_3_saturated}.

\begin{claim}
  Theorem~\ref{thm:main_stab_3_saturated} is true.
\end{claim}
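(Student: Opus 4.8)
The plan is to string together the lemmas and claims established throughout this section into a single chain of reductions, keeping track of the quantitative loss at each stage. We start from $f,g,h$ as in the statement with $\card{\Expect{(x,y,z)\sim\mu^{\otimes n}}{f(x)g(y)h(z)}}\geq\eps$. First we apply Lemma~\ref{lem:truncate_high_nonembed}: with $\delta=(\eps/20)^{3/\xi}$ the softly truncated triple $f'=\mathrm{T}_{\text{non-embed},1-\delta}f$, $g'=\mathrm{T}_{\text{non-embed},1-\delta}g$, $h'=\mathrm{T}_{\text{non-embed},1-\delta}h$ still satisfies $\card{\Expect{(x,y,z)\sim\mu^{\otimes n}}{f'(x)g'(y)h'(z)}}\geq\eps/2$. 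We then apply Lemma~\ref{lem:move_to_gp_functions}: after choosing $I\subseteq_{\kappa}[n]$ and a fixing $(\tilde x,\tilde y,\tilde z)$ of the dead coordinates, and replacing each restricted function by its non-embedding degree zero part, with probability at least $\eps/8$ the resulting functions $f^{\sharp},g^{\sharp},h^{\sharp}$ — which may be regarded as functions on $H^{I}$ — satisfy $\card{\Expect{(x,y,z)\sim\mu^{I}}{f^{\sharp}(x)g^{\sharp}(y)h^{\sharp}(z)}}\geq\eps/8$. We fix one such restriction for the remainder of the argument.

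For this restriction, Claims~\ref{claim:move_to_gp_functions_shift} and~\ref{claim:move_to_gp_functions_shift_fourier} (the shift to the uniform distribution on $\sett{(a,b,c)\in H^3}{a+b+c=0}$ followed by the classical Fourier computation) feed into Theorem~\ref{thm:restriction_inverse_special} exactly as packaged in Claim~\ref{claim:reverse_F_1}, producing a character $\chi\in\hat H^{I}$ and an embedding function $L$ of degree at most $d$ with $\norm{L}_2\leq 1$ and $\card{\inner{f^{\sharp}}{L\cdot\chi\circ\sigma}}\geq\eps'$, where $d={\sf poly}_{m,\alpha}(\log(1/\eps))$ and $\eps'=2^{-{\sf poly}_{m,\alpha}(\log(1/\eps))}$. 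Since $\mathrm{T}_{\text{non-embed},0}$ is self-adjoint and fixes embedding functions (Fact~\ref{fact:soft_nonbembed_op}), moving the operator from $f^{\sharp}$ back to $L$ gives $\card{\inner{f''}{L'\cdot\chi\circ\sigma}}\geq\eps'$ with $L'=\mathrm{T}_{\text{non-embed},0}L$, still of degree at most $d$ and $2$-norm at most $1$. Setting $f'''=f''\cdot\overline{\chi\circ\sigma}$, which is again $1$-bounded, we get $\card{\inner{f'''}{L'}}\geq\eps'$, hence $W_{\leq d}[f''']\geq\eps'^2$ by Cauchy-Schwarz. Lemma~\ref{lem:rest_to_correlation} then shows that, over a further random restriction keeping each coordinate with probability $1/(2d)$, with probability at least $\eps'^2/(2e)$ the restricted function $f'''$ has average of absolute value at least $\eps'/\sqrt{2e}$; undoing the multiplication by $\overline{\chi\circ\sigma}$, this says exactly that the corresponding restriction of $f''$ is correlated, to level $\eps'/\sqrt{2e}$, with a pure product function of the form $\chi'\circ\sigma$.

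It remains to unravel all the restrictions down to $f$ itself. Because $f''=\left(\mathrm{T}_{\text{non-embed},1-\delta}^{\otimes n}f\right)_{\overline I\to\tilde x}$ and $\mathrm{T}_{\text{non-embed},1-\delta}$ is self-adjoint and acts as the identity on embedding functions, the correlation of any further restriction of $f''$ with an embedding function $\chi'\circ\sigma$ equals the correlation of a corresponding restriction of $f$ — with the dead coordinates fixed to a mildly perturbed version of $\tilde x$, a distribution that still has all atoms bounded below and still produces $\mu$ on the live coordinates — with $\chi'\circ\sigma$, by commuting the restriction past the tensor operator and using that $\chi'\circ\sigma$ is an eigenfunction of eigenvalue $1$. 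Composing the keep-probabilities $\kappa$, $1/(2d)$, and the one in Claim~\ref{claim:reverse_F_1}, we obtain a single $\rho={\sf poly}_{m,\alpha}(\eps)$ and $\delta''=2^{-{\sf poly}_{m,\alpha}(1/\eps)}$, together with a decomposition $\mu=\rho\nu+(1-\rho)\nu'$ with all atoms of $\nu,\nu'$ bounded below by some $\alpha'=\alpha'(\alpha,m)>0$, such that
\[
\Prob{I'\subseteq_{\rho}[n],\ \tilde x\sim{\nu'}^{\overline{I'}}}{\exists\,\chi'\in\hat H^{I'},\ \card{\inner{f_{\overline{I'}\to\tilde x}}{\chi'\circ\sigma}_{L_2(\Sigma^{I'};\nu^{I'})}}\geq\eps'/\sqrt{2e}}\geq\delta''.
\]
A final application of Theorem~\ref{thm:restriction_inverse_special} then yields $\chi^{\star}\in\hat H^{n}$ and $L^{\star}\colon\Sigma^n\to\mathbb{C}$ of degree at most $d^{\star}={\sf poly}_{m,\alpha}(1/\eps)$ with $\norm{L^{\star}}_2\leq 1$ and $\card{\inner{f}{L^{\star}\cdot\chi^{\star}\circ\sigma}}\geq\eps^{\star}=2^{-{\sf poly}_{m,\alpha}(1/\eps)}$. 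Taking $u_i(x_i)=\chi^{\star}_i(\sigma(x_i))$ and $L=L^{\star}$, with $d^{\star}$ and $\eps^{\star}$ in the roles of $d$ and $\eps'$ from the statement, is precisely the conclusion of Theorem~\ref{thm:main_stab_3_saturated}; the statements for $g$ and $h$ are symmetric.

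The main obstacle is the bookkeeping in the third step: one must verify carefully that, up to the harmless tensor averaging operators $\mathrm{T}_{\text{non-embed},1-\delta}^{\otimes n}$ and $\mathrm{T}_{\text{non-embed},0}^{\otimes n}$ sitting between $f$ and $f''$, a random restriction of $f''$ really is a random restriction of $f$, so that the hypothesis of Theorem~\ref{thm:restriction_inverse_special} is genuinely met — this uses that $\chi'\circ\sigma$ is an eigenfunction of eigenvalue $1$ for these operators and their self-adjointness. One must also check that the composed keep-probability $\rho$ remains only polynomially small in $\eps$, so that the quantitative guarantees of the restriction inverse theorem have the claimed shape, and that all intermediate fixing distributions can be written in the mixture form $\rho\nu+(1-\rho)\nu'$ with atoms bounded below, as required to invoke that theorem.
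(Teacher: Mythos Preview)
Your overall architecture is the same as the paper's: soft truncation via Lemma~\ref{lem:truncate_high_nonembed}, then Lemma~\ref{lem:move_to_gp_functions} and Claims~\ref{claim:move_to_gp_functions_shift},~\ref{claim:move_to_gp_functions_shift_fourier}, then Claim~\ref{claim:reverse_F_1}, then Lemma~\ref{lem:rest_to_correlation}, and finally Theorem~\ref{thm:restriction_inverse_special}. The difference is in how you handle the passage from $f'=\mathrm{T}_{\text{non-embed},1-\delta}^{\otimes n}f$ back to $f$, and here your argument has a gap.

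You claim that the correlation of a restriction of $f''$ (hence of $f'$) with $\chi'\circ\sigma$ \emph{equals} the correlation of a corresponding restriction of $f$ with $\chi'\circ\sigma$, by ``commuting the restriction past the tensor operator''. This is not correct as stated. For a fixing $(\tilde x,x')$ of the dead coordinates one has
\[
\bigl\langle (\mathrm{T}^{\otimes n}f)_{\,\text{dead}\to(\tilde x,x')},\ \chi'\circ\sigma\bigr\rangle
\;=\;
\Expect{(\tilde x',x'')\sim \mathrm{T}^{\otimes\,\text{dead}}(\tilde x,x')}{\bigl\langle f_{\,\text{dead}\to(\tilde x',x'')},\ \chi'\circ\sigma\bigr\rangle},
\]
using that $\sigma$ is constant along $\mathrm{T}$-transitions; this is an \emph{average} over resampled fixings, not a single correlation. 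To feed this into Theorem~\ref{thm:restriction_inverse_special} for $f$ you would still need an extra averaging step (pass from ``absolute value of the average is large'' to ``with noticeable probability over $(\tilde x',x'')$ the absolute value is large''), and you would need to check that the overall distribution on the resampled dead coordinates is exactly of the form $\rho\nu+(1-\rho)\nu'$ required there. This can be done, but it is not what you wrote, and your phrase ``equals'' is literally false.

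The paper bypasses this entirely: it applies Theorem~\ref{thm:restriction_inverse_special} to $f'$ (not to $f$), using the trivial decomposition $\mu=\tfrac{\kappa}{2d}\mu+(1-\tfrac{\kappa}{2d})\mu$ to see that the composite restriction is of the required form. This yields $\chi\in\hat H^{n}$ and $L$ with $\card{\inner{f'}{L\cdot\chi\circ\sigma}}\geq\eps''$. Only at this final global step is self-adjointness invoked once:
\[
\card{\inner{f'}{L\cdot\chi\circ\sigma}}
=\card{\inner{f}{\mathrm{T}_{\text{non-embed},1-\delta}^{\otimes n}(L\cdot\chi\circ\sigma)}}
=\card{\inner{f}{(\mathrm{T}_{\text{non-embed},1-\delta}^{\otimes n}L)\cdot\chi\circ\sigma}},
\]
since $\chi\circ\sigma$ is constant on the connected components of the non-embedding Markov chain. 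This is a one-line reduction from $f'$ to $f$, with no need to commute operators past restrictions.
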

\begin{proof}
  Fixing $f,g,h$ and $\mu$ as in Theorem~\ref{thm:main_stab_3_saturated}, we conclude by Claim~\ref{claim:Abelian_arg_2}
  that
  \[
        \Prob{I, \tilde{x}, \tilde{y}, \tilde{z}}{\card{\Expect{(x,y,z)\sim\mu^{I}}{f''(x)g''(y)h''(z)}}\geq \frac{\eps}{4}}\geq \frac{\eps}{4},
  \]
  which by the above discussion implies that
  \[
        \Prob{\substack{I,  \tilde{x}, \tilde{y}, \tilde{z}\\ I', x',y',z'}}{\exists \chi'\in\hat{H}^{I'},
        ~\card{\inner{{f''}_{{I\setminus I'\rightarrow x'}}}{\chi'\circ \sigma}}\geq \frac{\eps'}{\sqrt{2e}}}\geq \frac{\eps}{4}\frac{\eps'^2}{2e}.
  \]
  Recalling the definition of $f''$, this means that
  \[
        \Prob{\substack{I,  \tilde{x}, \tilde{y}, \tilde{z}\\ I', x',y',z'}}{\exists \chi'\in\hat{H}^{I'},
        ~\card{\inner{{f'}_{\substack{\overline{I}\rightarrow\tilde{x}\\ I\setminus I'\rightarrow x'}}}{\chi'\circ \sigma}}\geq \frac{\eps'}{\sqrt{2e}}}\geq \frac{\eps\eps'^2}{8e},
  \]
  and we now apply Theorem~\ref{thm:restriction_inverse_special}. First, note that the restriction we are doing above fits the pattern therein;
  indeed, another way of viewing this restriction is as writing
  $\mu = \frac{\kappa}{2d} \mu + \left(1-\frac{\kappa}{2d}\right)\mu$.
  Theorem~\ref{thm:restriction_inverse_special} now implies that there is $\chi\in\hat{H}^{n}$ and $L\colon \Sigma^n\to\mathbb{C}$ with $\norm{L}_2\leq 1$
  and degree at most $d' = \eps^{-O_{m,\alpha}(1)}$ such that $\card{\inner{f'}{L\cdot \chi\circ \sigma}}\geq \eps''$
  where
  \[
  \eps''
  = 2^{-{\sf poly}_{m,\alpha}\left(\frac{\log(1/\eps')}{\kappa/2d}\right)}
  = 2^{-{\sf poly}_{m,\alpha}(1/\eps)}.
  \]
  Recalling that $f' = \mathrm{T}_{\text{non-embed}, 1-\delta} f$, we get that
  \[
  \eps''
  \leq
  \card{\inner{\mathrm{T}_{\text{non-embed}, 1-\delta} f}{L\cdot \chi\circ \sigma}}
  =
  \card{\inner{f}{\mathrm{T}_{\text{non-embed}, 1-\delta}(L\cdot \chi\circ \sigma})}
  =
  \card{\inner{f}{L' \cdot \chi\circ \sigma}}
  \]
  for $L' = \mathrm{T}_{\text{non-embed}, 1-\delta} L$. Here, we used the fact that $\chi\circ\sigma$ is
  constant on the connected components of $\mathrm{T}_{\text{non-embed}, 1-\delta}$. As ${\sf deg}(L')\leq {\sf deg}(L)$
  and $\norm{L'}_2\leq \norm{L}_2$, the proof is concluded.
\end{proof}

\section{Deducing the Structural Result for $f$: Proof of Theorem~\ref{thm:main_stab_3}}\label{sec:unraveling}
In this section we use Theorem~\ref{thm:main_stab_3_saturated} to prove Theorem~\ref{thm:main_stab_3}. At a high level, the argument proceeds as
follows:
\begin{enumerate}
  \item \textbf{Applying Theorem~\ref{thm:main_stab_3_saturated}.} Starting with a distribution $\mu$ and functions $f,g,h$ as in Theorem~\ref{thm:main_stab_3}, we use path tricks
  (and more specifically, Lemma~\ref{lem:iterate_path_trick_saturates}) to upper bound the $3$-wise correlation of $f,g$ and $h$ over $\mu$ by the
  $3$-wise correlation of functions $F, G$ and $H$ over a distribution $\mu'$, where values of $F$ correspond to product of values of $f$
  (as in Lemma~\ref{lem:from_mu_to_path}) and $G$ and $H$ are arbitrary (but $1$-bounded) functions. We are then in a position to apply Theorem~\ref{thm:main_stab_3_saturated}
  to conclude that the function $F$ is correlated with a function of the desired form, namely $L\cdot \chi\circ\sigma_{{\sf master}}'$ where $L$ is a low-degree function,
  $\chi\in\hat{H}$ is a character and $\sigma_{{\sf master}}'$ is part of the master embedding of $\mu'$.
  \item \textbf{Unraveling products.} Ignoring the low-degree part for a moment, we have that the function $F$ is a product of values of the function $f$,
  and as (by Lemma~\ref{lem:reverse_embed}) the master embedding $\sigma_{{\sf master}}'$ can be written as by an alternating sum of values of $\sigma_{{\sf master}}$
  (which is part of the master embedding of $\mu$), we have that $\chi\circ\sigma_{{\sf master}}'$ can also be written as product of values of $\chi\circ \sigma_{{\sf master}}$.
  Combining these facts, we conclude that $F\cdot\chi\circ\sigma_{{\sf master}}'$ can be written as product of values of $f\cdot\chi\circ\sigma_{{\sf master}}$, and we know
  that the expectation of this value over some distribution is significant. In applying the path tricks appropriately, we have made sure that the distribution of points
  on which we take product over is good enough (and more precisely, connected) so that this is only possible whenever $f\cdot\chi\circ\sigma_{{\sf master}}$
  have significant mass on the low-levels (see Lemma~\ref{lem:noticeable_to_lowdegwt}). Thus, we are able to conclude that
  $f\cdot\chi\circ\sigma_{{\sf master}}$ is correlated with a low-degree function, which is the result we are aiming for in Theorem~\ref{thm:main_stab_3}.
   \item \textbf{Applying the restriction inverse theorem.} To formalize this more precisely we must address the ``ignoring the low-degree function $L$'' part of the
   argument. For that, we apply random restrictions. Intuitively, a low-degree function becomes constant after random restrictions, and thus we would indeed be able to deduce that after random restriction,
   $F$ is correlated with a function of the form $\chi\circ \sigma_{{\sf master}}$. The rest of the argument proceeds in the same way, and we indeed manage to conclude that
   random restrictions of $f\cdot\chi\circ\sigma_{{\sf master}}$ are correlated with low-degree polynomials with noticeable probability. After further random restrictions
   we conclude that $f$ is correlated with a function of the form $\chi\circ\sigma_{{\sf master}}$, and to lift this information back to information about the function $f$ itself
   we use the restriction inverse theorem, namely Theorem~\ref{thm:restriction_inverse_special}.
\end{enumerate}

We now proceed to the formal argument.
\subsection{Applying Theorem~\ref{thm:main_stab_3_saturated}}
Let $f,g,h$ and $\mu$ be as in Theorem~\ref{thm:main_stab_3}, and let
$(\sigma_{{\sf master}},\gamma_{{\sf master}},\phi_{\sf master})$ be a master embedding
of $\mu$ into an Abelian group $(H,+)$ of size at most $r = r(m)\in\mathbb{N}$ as constructed in Lemma~\ref{lem:master_captures_all}. By Lemma~\ref{lem:iterate_path_trick_saturates} we may
apply the path trick on $\mu$ at most $T = T(m)$ times to get a distribution $\mu'$ over $\Sigma'\times\Gamma'\times \Phi'$,
where $\Sigma'=\Sigma^{T_1}$, $\Gamma'\subseteq \Gamma^{T_2}$, $\Phi'\subseteq \Phi^{T_3}$ and $T_1,T_2,T_3$ are odd numbers that depend
only on $m$. In addition, for future reference
we will look at the marginal distribution of $\mu'$ on its first coordinate as
$(x(1),\ldots,x(T_1))\sim \mu'$, and remark that by the construction in Lemma~\ref{lem:iterate_path_trick_saturates} it follows that the
marginal distribution of each $x(j)$ is $\mu_x$.

Following the evolution of the master embedding as in
Definition~\ref{def:master}, we see that a master embedding of $\mu'$
$(\sigma_{{\sf master}}',\gamma_{{\sf master}}', \phi_{{\sf master}}')$ is given by
\begin{align}\label{eq:unravel_0}
&\sigma_{{\sf master}}'(x(1),\ldots,x(T_1)) = \sum\limits_{j=1}^{T_1}(-1)^{j+1}\sigma_{{\sf master}}(x(j)),\notag\\
&\gamma_{{\sf master}}'(y(1),\ldots,y(T_2)) = \sum\limits_{j=1}^{T_2}(-1)^{j+1}\gamma_{{\sf master}}(y(j)),\notag\\
&\phi_{{\sf master}}'(z(1),\ldots,z(T_3)) = \sum\limits_{j=1}^{T_3}(-1)^{j+1}\phi_{{\sf master}}(z(j)).
\end{align}
Using Lemma~\ref{lem:from_mu_to_path} we that for $F\colon {\Sigma'}^{n}\to\mathbb{C}$ defined by
\begin{equation}\label{eq:unravel_00}
F(\vec{x}_1,\ldots,\vec{x}_n) = \prod\limits_{j=1}^{T_1}C^{j}\circ f(x(j)_1,\ldots,x(j)_n),
\end{equation}
where $C^{j}$ represents the operation of applying complex conjugate if $j$ is even and else applying the identity,
there are $1$-bounded functions $G\colon {\Gamma'}^n\to\mathbb{C}$ and $H\colon {\Phi'}^n\to\mathbb{C}$ such that
\[
\card{\Expect{(x,y,z)\sim \mu^{n}}{f(x)g(y)h(z)}}^{M}\leq \card{\Expect{(X,Y,Z)\sim {\mu'}^{n}}{F(X)G(Y)H(Z)}},
\]
where $M = M(m)\in\mathbb{N}$. By the premise of Theorem~\ref{thm:main_stab_3} we get
$\card{\Expect{(X,Y,Z)\sim {\mu'}^{n}}{F(X)G(Y)H(Z)}}\geq \eps^{M}$, hence we are in a position
to apply Theorem~\ref{thm:main_stab_3_saturated}. Using it, we get that there are $\chi\in\hat{H}^{n}$
and $L\colon{\Sigma'}^{n}\to\mathbb{C}$ such that:
\begin{enumerate}
  \item $\card{\inner{F}{L\cdot \chi\circ \sigma_{{\sf master}}'}}\geq \eps'$ where $\eps' = 2^{-{\sf poly}_{m,\alpha}\left(\frac{1}{\eps}\right)}$.
  \item $\norm{L}_2\leq 1$.
  \item $L$ has degree at most $d = {\sf poly}_{m,\alpha}\left(\frac{1}{\eps}\right)$.
\end{enumerate}
\subsection{Unraveling Products}
We now get rid of the low-degree part $L$ via random restrictions, and then use the product structure of $F$ and of $\chi\circ \sigma_{{\sf master}}'$
to convert the information regarding the correlation between them to information about the correlation between restrictions of $f$ and restrictions of
$\chi\circ \sigma_{{\sf master}}$. Let $\mathcal{D}$ be the marginal distribution of $\mu'$ on $\Sigma'$.
Choose $(I,\tilde{x})$ a random restriction by including in $I$ each $i\in [n]$ with probability $1/2d$ and taking $\tilde{x}\sim \mathcal{D}^{\overline{I}}$.
By Cauchy-Schwarz we have that
\[
W_{\leq d}[F\cdot \overline{\chi\circ \sigma_{{\sf master}}'}]
\geq
\card{\inner{F\cdot \overline{\chi\circ \sigma_{{\sf master}}'}}{L}}^2
=
\card{\inner{F}{L\cdot \chi\circ \sigma_{{\sf master}}'}}^2
\geq \eps'^2
\]
hence by Lemma~\ref{lem:rest_to_correlation} it follows that
\begin{equation}\label{eq:unravel_1}
\Prob{I,\tilde{x}\sim \mathcal{D}^{\overline{I}}}{\card{\E[F|_{\overline{I}\rightarrow \tilde{x}}\cdot \overline{\chi\circ \sigma_{{\sf master}}'|_{\overline{I}\rightarrow \tilde{x}}}]}\geq \frac{\eps'}{\sqrt{2e}}}
\geq \frac{\eps'^2}{2e}.
\end{equation}
Whenever this event holds we get that
\begin{equation}\label{eq:unravel_2}
\card{\inner{F|_{\overline{I}\rightarrow \tilde{x}}}{\chi\circ \sigma_{{\sf master}}'|_{\overline{I}\rightarrow \tilde{x}}}}\geq \frac{\eps'}{2e}.
\end{equation}
We denote $F' = F|_{\overline{I}\rightarrow \tilde{x}}$, and thus get that if $I,\tilde{x}$ are such that~\eqref{eq:unravel_1}
holds, then there is $\chi'\in\hat{H}^{I}$ such that $\card{\inner{F'}{\chi'\circ\sigma_{{\sf master}}'}}\geq \frac{\eps'}{\sqrt{2e}}$.
Each coordinate of the restriction of $\tilde{x}$, namely each $\tilde{x}_i$, is an element in $\Sigma'=\Sigma^{T_1}$ and we will view
it as a vector length $T_1$, $\tilde{x}_i = (\tilde{x}(1)_i,\ldots,\tilde{x}(T_1)_i)$. Thus, for each $j=1,\ldots,T_1$ we denote
\[
\tilde{x}(j) = (\tilde{x}(j)_i)_{i\in\overline{I}},
\]
With this notation, we get from~\eqref{eq:unravel_00} that
\begin{equation}\label{eq:unravel_5}
F'(x) = \prod\limits_{j=1}^{T_1} C^{j}\circ f_{\overline{I}\rightarrow \tilde{x}(j)}(x(j)).
\end{equation}

The following claim asserts that after further random restriction, with noticeable probability the function $f_{\overline{I}\rightarrow \tilde{x}(1)}$ is correlated
with a character.
\begin{claim}\label{claim:break_product}
Let $I$ and $\tilde{x}$ be such~\eqref{eq:unravel_2} holds. Then there is
$D = O_{\alpha,m}(\log(1/\eps'))$ such that choosing $J\subseteq I$ by including each element with probability $1/2D$ and sampling $\tilde{x}'\sim \mathcal{D}^{I\setminus J}$,
we have that
\[
\Prob{J, \tilde{x}'}{\exists \chi\in\hat{H}^{J},
~\card{\inner{f_{\substack{\overline{I}\rightarrow \tilde{x}(1)\\ I\setminus J\rightarrow \tilde{x}'(1)}}}{\chi\circ\sigma_{{\sf master}}}}\geq \frac{\eps'^2}{100}}
\geq \frac{\eps'^4}{1000}.
\]
\end{claim}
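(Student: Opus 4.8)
The plan is to unravel the product structure in~\eqref{eq:unravel_5} using a sequence of random restrictions, together with the observation (via Lemma~\ref{lem:reverse_embed} and~\eqref{eq:unravel_0}) that $\chi\circ\sigma_{{\sf master}}'$ factors as an alternating product of values of $\chi\circ\sigma_{{\sf master}}$. Concretely, combining~\eqref{eq:unravel_2} and~\eqref{eq:unravel_5} with the product formula $\chi'\circ\sigma_{{\sf master}}'(x) = \prod_{j=1}^{T_1} C^j\circ (\chi'\circ\sigma_{{\sf master}})(x(j))$ (which follows from~\eqref{eq:unravel_0}), we have that the expectation $\E\big[\prod_{j=1}^{T_1} C^j\circ (f_{\overline{I}\to\tilde x(j)}\cdot\overline{\chi'\circ\sigma_{{\sf master}}})(x(j))\big]$ over $x\sim \mathcal D^{I}$ is at least $\eps'/2e$ in absolute value.

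First I would recognize this expectation as an inner product of the function $u_1(x(1)) = f_{\overline{I}\to\tilde x(1)}(x(1))\overline{\chi'\circ\sigma_{{\sf master}}(x(1))}$ with an averaging-type operator applied to the remaining functions, where the relevant Markov chain on $\Sigma^{I}$ is the one induced by the conditional law of $(x(2),\ldots,x(T_1))$ given $x(1)$ under $\mu'$ (recall each marginal $\mu'_{x(j)}$ is $\mu_x$). Because we arranged (in Lemma~\ref{lem:iterate_path_trick_saturates} and the surrounding path-trick applications) that the support of $\mu'$ on its first coordinate is all of $\Sigma^{T_1}$, this Markov chain is connected with transition probabilities bounded below by a constant depending only on $m,\alpha$; hence by Fact~\ref{fact:MC_eval} its second eigenvalue is $1-\Omega_{m,\alpha}(1)$. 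Applying Lemma~\ref{lem:noticeable_to_lowdegwt} (in its tensorized form over the coordinates in $I$) then yields that $W_{\leq D}[u_1]\geq (\eps'/2e)^2/4$ for $D = O_{m,\alpha}(\log(1/\eps'))$; that is, $f_{\overline{I}\to\tilde x(1)}\cdot\overline{\chi'\circ\sigma_{{\sf master}}}$ has at least $\Omega(\eps'^2)$ of its $2$-norm squared on levels at most $D$.

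With this low-degree weight bound in hand, I would apply Lemma~\ref{lem:rest_to_correlation} with $F = u_1$ and this value of $D$: choosing $J\subseteq I$ by including each element independently with probability $1/2D$ and sampling $\tilde x'\sim\mathcal D^{I\setminus J}$, we get with probability at least $W_{\leq D}[u_1]/2e \geq \Omega(\eps'^4)$ that $\big|\E[u_{1,\,I\setminus J\to\tilde x'(1)}]\big|\geq \sqrt{W_{\leq D}[u_1]/2e}\geq \Omega(\eps'^2)$. Unwinding the definition of $u_1$, the quantity $\E[u_{1,\,I\setminus J\to\tilde x'(1)}]$ is exactly $\inner{f_{\overline{I}\to\tilde x(1),\,I\setminus J\to\tilde x'(1)}}{\chi'\circ\sigma_{{\sf master}}|_{\overline{I}\cup(I\setminus J)}}$ restricted to the live coordinates $J$; relabeling $\chi = \chi'|_J\in\hat H^J$ gives the claimed event, after adjusting the constants $\eps'^2/100$ and $\eps'^4/1000$ to absorb the $O(1)$ factors from the two lemmas. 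The main obstacle is bookkeeping: one must be careful that $\sigma_{{\sf master}}$ appearing on the right is the master embedding of the \emph{original} $\mu$ (not of $\mu'$), which is precisely what~\eqref{eq:unravel_0} and Lemma~\ref{lem:reverse_embed} guarantee, and that the random restriction structure (first $(I,\tilde x)$, then $(J,\tilde x')$ inside $I$) composes correctly as a single random restriction of $f$ so that the constants only depend on $m,\alpha$ and $\eps$; the connectedness of the first-coordinate Markov chain of $\mu'$, which is what makes Lemma~\ref{lem:noticeable_to_lowdegwt} applicable, is the conceptual crux and was deliberately engineered in the earlier path-trick steps.
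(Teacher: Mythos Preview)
Your outline is correct and is essentially the paper's argument. One point to tighten: what you call ``the Markov chain on $\Sigma^I$ induced by the conditional law of $(x(2),\ldots,x(T_1))$ given $x(1)$'' is a bipartite averaging operator $L_2(\Sigma,\mu_x)\to L_2(\Sigma^{T_1-1})$, not a Markov chain on $\Sigma$, so Lemma~\ref{lem:noticeable_to_lowdegwt} does not apply to it verbatim. The paper makes this step explicit via Cauchy--Schwarz: writing $Q(w)=\prod_{j\geq 2}C^j\circ(f_j\overline{\chi\circ\sigma_{\sf master}})(x(j))$ for $w=(x(2),\ldots,x(T_1))$, one squares $\big|\E_{(x(1),w)}[u_1(x(1))Q(w)]\big|$ to obtain $\big|\langle u_1,\mathrm{R}^{\otimes I}u_1\rangle\big|\geq \eps'^2/(2e)$, where $\mathrm{R}$ is the two-step chain on $\Sigma$ (sample $w$ given $x(1)$, then resample $x'(1)$ given $w$). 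This $\mathrm{R}$ \emph{is} a connected Markov chain on $\Sigma$ because $\mu'$ is fully supported on $\Sigma^{T_1}$, and now Lemma~\ref{lem:noticeable_to_lowdegwt} gives $W_{\leq D}[u_1]\geq \Omega(\eps'^4)$---one square weaker than you wrote, but exactly what the statement of the claim requires after Lemma~\ref{lem:rest_to_correlation}. Your direct route also works if you prove the bipartite singular-value analogue of Lemma~\ref{lem:noticeable_to_lowdegwt}, in which case your stated $\Omega(\eps'^2)$ bound on the low-degree weight is actually correct and saves a square.
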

\begin{proof}
  Note that by~\eqref{eq:unravel_5}, we may write $F'$ as
  \begin{equation}\label{eq:unravel_3}
  F'(x) = \prod\limits_{j=1}^{T_1} C^{j}\circ f_{j}(x(j)),
  \end{equation}
  where $f_{j} = f_{\overline{I}\rightarrow \tilde{x}(j)}$. By~\eqref{eq:unravel_0} we have
  \[
  \sigma_{\sf master}'(x)
  =\sum\limits_{j=1}^{T_1}(-1)^{j+1}\sigma_{\sf master}'((x(j)),
  \]
  and so
  \begin{equation}\label{eq:unravel_4}
  \chi\circ\sigma_{\sf master}'(x)
  =\prod\limits_{j=1}^{T_1}C^{j+1}\circ \chi\circ \sigma_{\sf master}'(x(j)).
  \end{equation}
  Plugging~\eqref{eq:unravel_3} and~\eqref{eq:unravel_4} into~\eqref{eq:unravel_2} yields that
  \[
  \card{\Expect{(x,x')\sim\mathcal{D}^{I}}{\prod\limits_{j=1}^{T_1}
  C^{j}\circ f_{j}(x(j))\cdot C^{j+1}\circ \chi\circ \sigma_{\sf master}'(x(j))}}\geq \frac{\eps'}{\sqrt{2e}}.
  \]
  Let $\tilde{\Sigma} = \Sigma^{T_1-1}$ and define $Q\colon \tilde{\Sigma}^I\to\mathbb{C}$ by
  \[
  Q(w) = \prod\limits_{j=2}^{T_1}
  C^{j}\circ f_{j}(w(j))\cdot C^{j+1}\cdot \chi\circ \sigma_{\sf master}'(w(j)),
  \]
  then we get that
  $\card{\Expect{(x,w)\sim \nu^{I}}{f_{1}(x)\cdot C\circ \chi\circ \sigma_{\sf master}'(x) \cdot Q(w)}}\geq \frac{\eps'}{\sqrt{2e}}$, where
  $\nu$ is the marginal distribution of $\mathcal{D}$ on the first coordinate viewed as an element in $\Sigma\times\tilde{\Sigma}$.
  Consider the distribution $\nu'$ over $\Sigma^2$ where we sample $(x,w)\sim \nu$, then $(x',w')\sim \nu$ conditioned on $w' = w$,
  and then output $(x,x')$. By Cauchy-Schwarz it follows that
  \begin{align*}
    \frac{\eps'^2}{2e}
    &\leq  \card{\Expect{(x,w)\sim \nu^{I}}{f_{1}(x)\cdot C\circ\chi\circ \sigma_{\sf master}'(x) \cdot Q(w)}}^2\\
    &\leq  \card{\Expect{(x,x')\sim {\nu'}^{I}}{f_{1}(x)\cdot C\circ\chi\circ \sigma_{\sf master}'(x) \cdot \overline{f_{1}(x')\cdot C\circ\chi\circ \sigma_{\sf master}'(x')}}}\\
    &=\card{\inner{f_{1}\cdot C\circ\chi\circ \sigma_{\sf master}}{\mathrm{R}^{I}(f_{1} \cdot C\circ\chi\circ \sigma_{\sf master})}},
  \end{align*}
  where $\mathrm{R}\colon L_2(\Sigma; \nu_x)\to L_2(\Sigma; \nu_x)$ is the averaging operator corresponding to $\nu'$ defined by
  $\mathrm{R} p(a) = \cExpect{(x,x')\sim \nu}{x = a}{p(x')}$.

  As the support of $\mu'$ on the first coordinate is full we get that
  the support of $\nu'$ on $\Sigma^{2}$ is full, hence $\mathrm{R}$ is connected. Also, the probability of each atom is at least
  $\alpha' = \alpha'(\alpha,m)>0$, so by Lemma~\ref{lem:noticeable_to_lowdegwt} we conclude that for $D = O_{\alpha,m}(\log(1/\eps'))$
  we have that $W_{\leq D}[f_{1}\chi\circ \sigma_{\sf master}]\geq \frac{\eps'^4}{4e^2}$. The proof is now concluded by Lemma~\ref{lem:rest_to_correlation}.
\end{proof}

\subsection{Applying the Restriction Inverse Theorem}
Combining~\eqref{eq:unravel_1} and Claim~\ref{claim:break_product} gives that
\[
\Prob{\substack{I,\tilde{x}\sim \mathcal{D}^{I}\\ J, \tilde{x}'\sim\mathcal{D}^{I\setminus J}}}
{\exists \chi\in\hat{H}^{J},~\card{\inner{f|_{\substack{I\rightarrow \tilde{x}(1)\\ I\setminus J\rightarrow \tilde{x}'(1)}}}{\chi\circ \sigma_{{\sf master}}''}}
\geq\frac{\eps'^2}{100}}\geq \frac{\eps'^2}{2e}\cdot\frac{\eps'^4}{1000}=\frac{\eps'^6}{2000e}.
\]
We now appeal to the restriction inverse theorem, Theorem~\ref{thm:restriction_inverse_special}, to finish the proof. Towards this end, we note that the sequence of
restrictions can be viewed as a standard restriction: note that the distribution of $\tilde{x}(1)$ where $\tilde{x}\sim \mathcal{D}$ is $\mu_x$, so the above
restriction amounts to choosing $K\subseteq[n]$ by including each element with probability $\frac{1}{4dD}$, and then restricting the coordinates outside $K$
to be $x'\sim \mu_x^{\overline{K}}$. Thus, we may appeal Theorem~\ref{thm:restriction_inverse_special}, and the result follows.

\section{The Restriction Inverse Theorem}\label{sec:rest_inverse}
The main goal of this section is to prove the restriction inverse theorem, Theorem~\ref{thm:restriction_inverse_special}.
We begin by presenting a few notions that will be necessary for the statement of the theorem, give a formal
statement of a slight generalization of Theorem~\ref{thm:restriction_inverse_special} and then present some tools necessary for the proof.
Finally, in Section~\ref{sec:proof_of_rest_inverse} we give the formal proof.

\subsection{Product Functions and Classes of Product Functions}
The restriction inverse theorem is concerned with functions that, after random restriction, are correlated with product
functions, defined as follows.
\begin{definition}
  We say $f'\colon \Sigma^n\to\mathbb{C}$ is a product function if there are functions
  $p_1,\ldots,p_n\colon \Sigma\to\mathbb{C}$ that are $1$-bounded such that
  \[
  f(x_1,\ldots,x_n) = \prod\limits_{i=1}^{n} p_i(x_i).
  \]
  We denote by $\mathcal{P}(n,\Sigma)$ the collection of all product functions over $\Sigma^n$, and
  denote $\mathcal{P}(\Sigma) = \bigcup_{n\in\mathbb{N}}\mathcal{P}(n,\Sigma)$.
\end{definition}

We will need the notion of a class of product functions, which is a sub-collection of functions closed under restrictions.
For technical reasons, this closure will be up to multiplying by a complex number of absolute value $1$.
\begin{definition}
  We say $\mathcal{F}(\Sigma)\subseteq \mathcal{P}(\Sigma)$ is a class of product functions if
  for all $f'\in \mathcal{F}(\Sigma)$, say $f'\colon \Sigma^{n}\to\mathbb{C}$, for all $I\subseteq [n]$
  and for all $x\in\Sigma^{\overline{I}}$ it holds that there is $\theta\in\mathbb{C}$ of absolute value $1$
  such that $\theta f'_{\overline{I}\rightarrow x}\in \mathcal{F}(\Sigma)$.
\end{definition}
We also need the notion of separateness of product functions. Intuitively, this says that any two univariate functions in the class either have correlation $1$,
or else the correlation is bounded away from $1$.
\begin{definition}
  Suppose $\Sigma$ is a finite alphabet, $\mu$ is a distribution over $\Sigma$ and consider the inner product spaces $L_2(\Sigma^n, \mu^{\otimes n})$.
  For $\tau>0$,  we say a collection of product functions $\mathcal{F}(\Sigma)\subseteq \mathcal{P}(\Sigma)$ is $\tau$-separated if
  for any uni-variate functions $p,p'\in \mathcal{F}(\Sigma)$ it is either the case that $p=p'$, or else
  $\card{\inner{p}{p'}}\leq 1-\tau$.
\end{definition}

An important class of functions for us will be the class of functions arising from Abelian embeddings. Suppose
that we have an Abelian group $(H,+)$ and a map $\sigma\colon \Sigma\to H$. Then in this setting, we may define the
collection
\[
\mathcal{F}_{\sigma}(\Sigma,n) =
\sett{ f(x_1,\ldots,x_n) = \prod\limits_{i=1}^{n}\chi_i(\sigma(x_i))}{\chi_i\in\hat{H}~\forall i=1,\ldots,n}.
\]
The following fact asserts that the collection $\mathcal{F}_{\sigma}(\Sigma) = \bigcup_{n\in\mathbb{N}}\mathcal{F}(\Sigma,n)$ is a class of product functions
which is separated.
\begin{fact}\label{fact:F_are_class}
  For all $m\in\mathbb{N}$ and $\alpha>0$ there is $\tau>0$ such that the following holds.
  If $\Sigma$ is an alphabet of size at most $m$ and $\mu$ is a distribution over $\Sigma$ in which the probability of each atom is at least $\alpha$,
  then collection $\mathcal{F}_{\sigma}(\Sigma)$ is a class of product functions which is $\tau$-separated.
\end{fact}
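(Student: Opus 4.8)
The plan is to prove Fact~\ref{fact:F_are_class} in two parts: first, that $\mathcal{F}_{\sigma}(\Sigma)$ is a class of product functions (closure under restriction up to a unit scalar), and second, that it is $\tau$-separated for some $\tau = \tau(m,\alpha)>0$. The first part is essentially bookkeeping. Given $f\in \mathcal{F}_{\sigma}(\Sigma,n)$ with $f(x_1,\ldots,x_n) = \prod_{i=1}^n \chi_i(\sigma(x_i))$, a set $I\subseteq [n]$ and a fixing $\tilde{x}\in \Sigma^{\overline{I}}$, the restricted function is $f_{\overline{I}\to\tilde{x}}(x) = \left(\prod_{i\in\overline{I}}\chi_i(\sigma(\tilde{x}_i))\right)\prod_{i\in I}\chi_i(\sigma(x_i))$. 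The first factor is a fixed complex number $\theta_0$; since each $\chi_i(\sigma(\tilde{x}_i))$ lies on the unit circle (characters of a finite Abelian group take values in roots of unity), $|\theta_0| = 1$. Taking $\theta = \overline{\theta_0}$ we get $\theta f_{\overline{I}\to\tilde{x}} = \prod_{i\in I}\chi_i(\sigma(x_i)) \in \mathcal{F}_{\sigma}(\Sigma, |I|)$, as required. I would also note for completeness that if $0$ is in the image of $\sigma$ (which we may assume as per the convention established after Lemma~\ref{lem:master_captures_all}) the constant $1$ function is in the class, but this is not strictly needed.

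For the separateness, I would reduce to a statement about univariate functions and then to a finite, alphabet-size-bounded case where a compactness or direct counting argument applies. A univariate function in $\mathcal{F}_{\sigma}(\Sigma)$ has the form $p(x) = \chi(\sigma(x))$ for some $\chi\in\hat{H}$, and $p'(x) = \chi'(\sigma(x))$. Then $\inner{p}{p'} = \Expect{x\sim\mu}{\chi(\sigma(x))\overline{\chi'(\sigma(x))}} = \Expect{x\sim\mu}{(\chi\overline{\chi'})(\sigma(x))}$, which depends only on $\psi := \chi\overline{\chi'}\in\hat{H}$ and on the pushforward distribution $\mathcal{D}$ of $\sigma(x)$ on $H$. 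If $p = p'$ as functions on $\Sigma$ — equivalently, $\chi$ and $\chi'$ agree on $\mathrm{Image}(\sigma)$, i.e. $\psi$ is trivial on $\mathrm{Image}(\sigma)$ — then $\inner{p}{p'} = 1$ (since $|p(x)| = 1$ everywhere). Otherwise $\psi$ is nontrivial on $\mathrm{Image}(\sigma)$, so there is some $h_0\in\mathrm{Image}(\sigma)$ with $\psi(h_0)\neq 1$; since the probability of each atom of $\mu$ is at least $\alpha$, the point $h_0$ has mass $\mathcal{D}(h_0)\geq \alpha$. I would then bound $\big|\Expect{x\sim\mu}{\psi(\sigma(x))}\big| = \big|\sum_{h\in H}\mathcal{D}(h)\psi(h)\big| \leq 1 - \mathcal{D}(h_0)(1 - \mathrm{Re}\,\psi(h_0))$, using that all the other terms have modulus $\mathcal{D}(h)$ and the term at $h_0$ has real part strictly less than $\mathcal{D}(h_0)$. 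The quantity $1 - \mathrm{Re}\,\psi(h_0)$ is bounded below by a constant depending only on $|H|\leq r(m)$ (the smallest nonzero value of $1-\mathrm{Re}\,\zeta$ over all roots of unity $\zeta$ of order dividing $\exp(H)$), call it $\kappa_m > 0$; hence $|\inner{p}{p'}| \leq 1 - \alpha\kappa_m$, and we may take $\tau = \alpha\kappa_m$, which depends only on $m$ and $\alpha$ as claimed.

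The main subtlety — though it is minor — is handling the equivalence ``$p = p'$ as functions on $\Sigma$ versus $\chi = \chi'$ as characters of $H$''. These differ precisely when $\mathrm{Image}(\sigma)$ is a proper subgroup (or not a subgroup at all), in which case distinct characters of $H$ can restrict to the same function on $\Sigma$. The definition of $\tau$-separatedness only cares about $p, p'$ as functions, so the dichotomy we need is exactly ``$p=p'$ or $|\inner{p}{p'}|\leq 1-\tau$'', and the argument above delivers this because the case $p\neq p'$ forces $\psi$ nontrivial somewhere on $\mathrm{Image}(\sigma)$, which is all that is used. If one prefers to avoid the explicit constant $\kappa_m$, one can instead argue by compactness: the set of pairs $(\mathcal{D},\psi)$ with $\mathcal{D}$ a distribution on a group of size at most $r(m)$ with min-atom at least $\alpha$ and $\psi$ a nontrivial character on the support of $\mathcal{D}$ is compact, the map $(\mathcal{D},\psi)\mapsto |\sum_h \mathcal{D}(h)\psi(h)|$ is continuous and strictly less than $1$ on it, hence bounded away from $1$ by some $1-\tau$; this is the same style of argument used for Claim~\ref{claim:additive_base_case}. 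Either route completes the proof; I would present the explicit one since it gives an effective $\tau$.
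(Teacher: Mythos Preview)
Your approach is sound in outline and a bit cleaner than the paper's (which rotates $p$ to make $\inner{p}{p'}$ real and then lower-bounds $\norm{p-p'}_2^2$ via a somewhat fiddly two-point argument). The closure-under-restriction part is correct and matches the paper. The reduction of separateness to bounding $\big|\E_{x\sim\mu}\psi(\sigma(x))\big|$ with $\psi=\chi\overline{\chi'}$ is the right move.

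However, the specific inequality you assert,
\[
\Big|\sum_{h\in H}\mathcal{D}(h)\psi(h)\Big| \;\leq\; 1 - \mathcal{D}(h_0)\big(1 - \mathrm{Re}\,\psi(h_0)\big),
\]
is false in general. Counterexample: support $\{0,h_0\}$ with $\mathcal{D}(0)=\mathcal{D}(h_0)=\tfrac12$, $\psi(0)=1$, $\psi(h_0)=i$; the left side is $|\tfrac12+\tfrac{i}{2}|=\tfrac{1}{\sqrt 2}\approx 0.707$ while your bound gives $\tfrac12$. Your justification (``the other terms have modulus $\mathcal{D}(h)$ and the term at $h_0$ has real part strictly less than $\mathcal{D}(h_0)$'') bounds the \emph{real part} of the sum, not its modulus, and these differ when the imaginary part is nonzero.

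The fix is routine. Using that $0\in{\sf Image}(\sigma)$ (so $\psi(0)=1$ and $\mathcal{D}(0)\geq\alpha$), one has the identity
\[
1-\Big|\sum_h \mathcal{D}(h)\psi(h)\Big|^2 \;=\; \tfrac12\sum_{h,h'}\mathcal{D}(h)\mathcal{D}(h')\,|\psi(h)-\psi(h')|^2 \;\geq\; \mathcal{D}(0)\mathcal{D}(h_0)\,|\psi(h_0)-1|^2 \;\geq\; \alpha^2\kappa_m,
\]
where $\kappa_m=\min\{|1-\zeta|^2:\zeta^{|H|}=1,\ \zeta\neq 1\}>0$; hence $|\inner{p}{p'}|\leq\sqrt{1-\alpha^2\kappa_m}\leq 1-\tfrac12\alpha^2\kappa_m$. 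Your compactness alternative also works without change. Either route gives $\tau=\tau(m,\alpha)$, with the implicit understanding (present throughout the paper) that $|H|$ is bounded in terms of $m$ via the master-embedding construction.
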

\begin{proof}
  It is clear that each function in $\mathcal{F}_{\sigma}(\Sigma)$ is a product function. Also, when we restrict a set of variables, the corresponding terms give a constant
  factor $\theta$ with absolute value $1$, hence $\mathcal{F}_{\sigma}(\Sigma)$ is closed under restrictions.

  For the $\tau$-separatedness, fix univariate functions $p, p'\in \mathcal{F}_{\sigma}(\Sigma)$ and suppose that
  $p\neq p'$. We note that $\card{\inner{p}{p'}} < 1$: otherwise, Cauchy-Schwarz would be tight, hence $p$ and $p'$ would
  be proportional. However, as $0\in H$ is in the image of $\sigma_{{\sf master}}$ we may find $x$ such that $\sigma_{{\sf master}}(x) = 0$,
  and so $p(x) = 1 = p'(x)$, and in conjuction with the fact they are proportional we would get that $p \equiv p'$.

  Multiplying $p$ by a constant of absolute value $1$,
  we may assume that $\inner{p}{p'} \geq 0$, hence $\inner{p}{p'} = 1-\frac{1}{2}\norm{p-p'}_2^2$.
  By definition of $\hat{H}$, we may find $T = T(m)>0$ and $\lambda$ of absolute value $1$
  such that the values of $p$ and $p'$ take the forms $\lambda e^{\frac{L}{T}2\pi{\bm i}}$
  and $e^{\frac{L'}{T}2\pi{\bm i}}$ respectively, where $L,L'$ are integers.

  As $p$ and $p'$ are not proportional there are $a,b\in \Sigma$ such that
  $p(a) \neq p'(a)$ and $b$ such that $p(b)\neq e^{-\frac{1}{2T} 2\pi{\bm i}} p'(b)$.
  Writing $\lambda = e^{2\pi{\bm i} \theta}$ we have
  \[
  p(a) = e^{\frac{L_1+\theta T}{T}2\pi{\bm i}},
  \qquad
  p'(a) = e^{\frac{L_1'}{T}2\pi{\bm i}},
  \qquad
  p(b) = e^{\frac{L_2+\theta T}{T}2\pi{\bm i}},
  \qquad
  p'(b) = e^{\frac{L_2'+1/2}{T}2\pi{\bm i}}
  \]
  for integers $L_1, L_2, L_3, L_4$. Then
  \[
  \norm{p-p'}_2^2
  \geq \alpha\card{p(a) - p'(a)}^2+\alpha\card{p(b) - p'(b)}^2,
  \]
  and we argue that the right hand side is at least $\Omega(\alpha/T^2)$. Indeed,
  \[
  \card{p(a) - p'(a)} = \card{e^{\frac{L_1'-L_1 -\theta T}{T}2\pi{\bm i}} - 1},
  \qquad
  \card{p(b) - p'(b)} = \card{e^{\frac{L_2'-L_2 -\theta T + 1/2}{T}2\pi{\bm i}} - 1},
  \]
  so if both are at most $\frac{1}{100T}$ then each one of $-\theta T$ and $-\theta T + \frac{1}{2}$ is $\frac{1}{10}$-close
  to be an integer, but by the triangle inequality this is impossible.
\end{proof}

\subsection{Statement of the Restriction Inverse Theorem}
With the above set up, we are now ready to state the restriction inverse theorem.
\begin{thm}\label{thm:restriction_inverse}
  For all $\alpha,\tau>0$, $m\in\mathbb{N}$, $\rho\in (0,1)$ and $\eps>0$ there are $d\in\mathbb{N}$ and $\eps'>0$ such that the following holds.
  Suppose that $\Sigma$ is an alphabet of size at most $m$ and $\mu,\nu,\nu'$ are a distribution over $\Sigma$ in which the probability of each atom is at least
  $\alpha$ and $\mu = \rho\nu + (1-\rho)\nu'$. Suppose further that
  $\mathcal{F}(\Sigma)$ is a class of product functions that is $\tau$ separated.

  If $f\colon(\Sigma^n,\mu^{\otimes n})\to\mathbb{C}$ is a $1$-bounded function such that
  \[
  \Prob{I\subseteq_{\rho}[n], \tilde{x}\sim\mu^{\overline{I}}}{\exists f'\in \mathcal{F}(\Sigma), \card{\inner{f_{\overline{I}\rightarrow \tilde{x}}}{f'}}\geq \eps}\geq \eps,
  \]
  then there exist $f'\in\mathcal{F}(\Sigma)$, as well as $L\colon\Sigma^n\to\mathbb{C}$ a function of degree at most $d$ and $\norm{L}_2\leq 1$,
  such that
  \[
    \card{\inner{f}{Lf'}}\geq \eps'.
  \]
  Quantitatively, $d = {\sf poly}_{m,\alpha,\tau}\left(\frac{\log(1/\eps)}{\rho}\right)$ and
  $\eps' = 2^{-{\sf poly}_{m,\alpha,\tau}\left(\frac{\log(1/\eps)}{\rho}\right)}$.
\end{thm}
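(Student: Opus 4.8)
The plan is to reduce Theorem~\ref{thm:restriction_inverse} to the direct product testing theorem, Theorem~\ref{thm:dp_intro}, in the $1\%$ regime. The first step is to replace the (a priori infinite) class $\mathcal{F}(\Sigma)$ by a finite ``alphabet'' of univariate functions: since $\mathcal{F}(\Sigma)$ is $\tau$-separated, any two distinct univariate members have correlation at most $1-\tau$, hence after aligning phases lie at bounded-below $L_2$-distance, and a standard compactness argument on the space of $1$-bounded functions $\Sigma\to\mathbb{C}$ modulo a unit-scalar phase shows there are at most $R=R(m,\tau)$ equivalence classes $q_1,\dots,q_R$. Thus every product function in $\mathcal{F}(\Sigma)$ over a coordinate set $I$ is described, up to a single phase, by an index vector in $[R]^{I}$.

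Next I would build a combinatorial encoding. Fix $k\approx\rho n$, and for $A\in\binom{[n]}{k}$ on which the hypothesis holds for a non-negligible fraction of fixings $\tilde{x}\sim\mu^{\overline{A}}$, record for each such $\tilde x$ the index vector $\iota(A,\tilde x)\in[R]^{A}$ of an approximately optimal $f'\in\mathcal{F}(\Sigma)$ witnessing $\card{\inner{f_{\overline{A}\rightarrow\tilde x}}{f'}}\geq\eps$; let $G[A]\in[R]^{A}$ be a plurality vote of these vectors, with low-confidence or absent entries broken arbitrarily. Two facts must then be proved: (i) for a $\mathrm{poly}(\eps)$-fraction of $A$, the plurality $G[A]$ agrees with $\iota(A,\tilde x)$ on all but $O_{m,\alpha,\tau}(\log(1/\eps))$ coordinates for a $\mathrm{poly}(\eps)$-fraction of $\tilde x$ (an averaging/concentration argument from the hypothesis), and (ii) for $C$ of size $\approx\alpha k$ and $A,B\supseteq C$ sampled as in ${\sf DP}(\rho,\alpha,\beta)$, with noticeable probability the event holds on both $A$ and $B$ and $G[A],G[B]$ agree on all but $O_{m,\alpha,\tau}(\log(1/\eps))$ coordinates of $C$. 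Given (ii), a random test set $T$ of density $\beta$ avoids those few disagreements with probability at least $(1-\beta)^{O_{m,\alpha,\tau}(\log(1/\eps))}$, so ${\sf DP}(\rho,\alpha,\beta)$ passes with probability at least $\mathrm{poly}_{m,\alpha,\tau}(\eps)\cdot(1-\beta)^{O(\log(1/\eps))}$. The content of (ii), and the main obstacle of the whole argument, is that two restrictions of $f$ over a common block $C$ which both correlate with product functions must correlate with essentially the \emph{same} product function on $C$: any coordinate of $C$ on which the two witnessing univariate functions differ by more than a phase costs a multiplicative factor $1-\Omega(\tau)$ by Cauchy--Schwarz and $\tau$-separation, so at most $O(\log(1/\eps)/\tau)$ such coordinates occur. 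Setting this up so that applying the hypothesis on the block $C$ (with the right correlated distribution on the fixings outside $C$) forces the agreement, while controlling the dependence of the witnessing product functions on $\tilde x$, is exactly what the parameters $\alpha$ and $\beta$ of the test are there to absorb, and is the delicate point.

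With the test shown to pass with noticeable probability, Theorem~\ref{thm:dp_intro} provides a global $\phi\colon[n]\to[R]$ and $r=r(m,\alpha,\tau,\rho,\beta,\eps)$ with $\Pr_{A}[\Delta(\phi|_{A},G[A])\leq r]\geq\delta$, $\delta$ polynomial in the test's success probability. Define the candidate global product function $f^{*}(x)=\prod_i q_{\phi(i)}(x_i)$ with phases chosen coherently. For each of the $\delta$-fraction of good $A$ and an appropriate $\tilde x$, combining $\Delta(\phi|_A,G[A])\leq r$ with the witness $\card{\inner{f_{\overline A\rightarrow\tilde x}}{f'}}\geq\eps$ shows $f_{\overline A\rightarrow\tilde x}$ correlates by at least $\mathrm{poly}_{m,\tau}(\eps)$ with $f^{*}|_A$ times a function supported on $O(r)$ coordinates of $2$-norm at most $1$, i.e.\ a degree-$O(r)$ function. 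Finally I would unravel the restriction exactly as in Section~\ref{sec:the_hastad_argument}: by the averaging behind Lemma~\ref{lem:rest_to_correlation} and the low-degree extraction of Lemma~\ref{lem:noticeable_to_lowdegwt}, the statement ``a random restriction of $f\cdot\overline{f^{*}}$ correlates with a low-degree function with noticeable probability'' lifts to ``$f\cdot\overline{f^{*}}$ itself correlates with a low-degree function'', giving $\card{\inner{f}{L f^{*}}}\geq\eps'$ with $\deg(L)\leq d$, $\norm{L}_2\leq 1$, and $d=\mathrm{poly}_{m,\alpha,\tau}(\log(1/\eps)/\rho)$, $\eps'=2^{-\mathrm{poly}_{m,\alpha,\tau}(\log(1/\eps)/\rho)}$, the quantitative bounds tracking those of Theorem~\ref{thm:dp_intro}. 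A pervasive nuisance is phase bookkeeping, since $\mathcal{F}(\Sigma)$ is closed under restriction only up to a global unit scalar; this is handled by normalizing and carrying one phase per product function throughout.
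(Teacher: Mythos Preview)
Your overall architecture is right and matches the paper: encode product functions as strings over a finite alphabet $[R]$ using $\tau$-separation, build a direct-product assignment, apply the direct product theorem, then unravel. The finishing step (lifting ``random restriction of $f\overline{f^*}$ has low-degree correlation'' to ``$f\overline{f^*}$ itself has low-degree mass'') is also what the paper does.

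The gap is in your step (ii), and it is exactly the step you flag as ``the delicate point.'' Your argument is that if $f_{\overline{A}\to\tilde{x}_A}$ correlates with $p_A$ and $f_{\overline{B}\to\tilde{x}_B}$ correlates with $p_B$, then $p_A|_C$ and $p_B|_C$ can disagree on at most $O(\log(1/\eps)/\tau)$ coordinates, because each disagreement costs a multiplicative $1-\Omega(\tau)$. But this inference is not valid: the two correlations involve \emph{different} restricted functions, and nothing forces a single object to be simultaneously close to both $p_A|_C$ and $p_B|_C$. Without such a bridge, $\tau$-separation buys you nothing. The same problem bites your plurality vote one level earlier: for fixed $A$, different fixings $\tilde{x}$ may produce genuinely different witnesses, and you have given no mechanism that forces these witnesses to agree on most coordinates.

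The paper resolves this with an additional layer of structure that your proposal does not contain. First it replaces the a priori unbounded ${\sf List}_\eps$ by a bounded-size ${\sf ShortList}$ (Lemma~\ref{lem:net_of_prods}). Then it passes to a further sub-restriction $I'\subseteq_{1/2} I$ with additional fixing $z'$, and observes that $p|_{I'}$ lands in the short list of $(I',z\circ z')$ with probability $\geq \eps^2/4$; the point is that $p|_{I'}$ depends on $I,I',z$ but \emph{not} on $z'$. This partial decoupling is then upgraded, via a small-set expansion argument on the product Markov chain linking $z\circ z'$ to $z\circ z''$ (Claim~\ref{claim:use_sse_inverse_restriction}), to show that typical witnesses, after restricting to a random test set $T$, land in a global object $W_{I',T}$ that does not depend on the fixing at all. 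Only then is the (randomized) direct-product strategy defined, and the consistency check of the test succeeds precisely because both witnesses land in this common global object. This decoupling-via-sub-restriction together with the small-set expansion step is the substantive content you are missing; your Cauchy--Schwarz heuristic does not substitute for it.
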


We note that Theorem~\ref{thm:restriction_inverse} immediately implies Theorem~\ref{thm:restriction_inverse_special}:
\begin{claim}
  Theorem~\ref{thm:restriction_inverse} implies Theorem~\ref{thm:restriction_inverse_special}.
\end{claim}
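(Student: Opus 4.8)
The plan is to obtain Theorem~\ref{thm:restriction_inverse_special} as the special case of Theorem~\ref{thm:restriction_inverse} in which the class of product functions is the one coming from the given Abelian map. Concretely, given $\Sigma$, the Abelian group $(H,+)$ with $\card{H}\leq r$, and $\sigma\colon\Sigma\to H$ (which has $0$ in its image), I would apply Theorem~\ref{thm:restriction_inverse} with $\mathcal{F}(\Sigma)=\mathcal{F}_{\sigma}(\Sigma)$, the class of all functions $x\mapsto\prod_{i}\chi_i(\sigma(x_i))$ with $\chi_i\in\hat{H}$, and with the same decomposition $\mu=\rho\nu+(1-\rho)\nu'$. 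By Fact~\ref{fact:F_are_class}, $\mathcal{F}_{\sigma}(\Sigma)$ is indeed a class of product functions and is $\tau$-separated for some $\tau=\tau(m,\alpha)>0$, so Theorem~\ref{thm:restriction_inverse} is applicable. To reconcile the fact that Theorem~\ref{thm:restriction_inverse_special} has two parameters — a correlation threshold $\eps$ and a probability threshold $\delta$ — while Theorem~\ref{thm:restriction_inverse} uses a single parameter, I would invoke the latter with its parameter set to $\eps_0=\min(\eps,\delta)$, which we may assume is as small as we wish.

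Next I would verify that the hypothesis of Theorem~\ref{thm:restriction_inverse} holds with these choices. Every function $\chi\circ\sigma$ with $\chi\in\hat{H}^{I}$ is by definition a member of $\mathcal{F}_{\sigma}(\Sigma)$; the random restriction in the hypothesis of Theorem~\ref{thm:restriction_inverse_special} — sample $I\subseteq_{\rho}[n]$ and $\tilde{x}\sim{\nu'}^{\overline{I}}$, then regard $f_{\overline{I}\rightarrow\tilde{x}}$ as a function on $(\Sigma^{I},\nu^{I})$ — is precisely the non--measure-preserving random restriction of Section~\ref{sec:random_restrictions} associated with $\mu=\rho\nu+(1-\rho)\nu'$, which is the restriction appearing in the hypothesis of Theorem~\ref{thm:restriction_inverse}. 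Since the hypothesis of Theorem~\ref{thm:restriction_inverse_special} supplies, with probability at least $\delta\geq\eps_0$, some $\chi\circ\sigma\in\mathcal{F}_{\sigma}(\Sigma)$ with $\card{\inner{f_{\overline{I}\rightarrow\tilde{x}}}{\chi\circ\sigma}_{L_2(\Sigma^{I};\nu^{I})}}\geq\eps\geq\eps_0$, the hypothesis of Theorem~\ref{thm:restriction_inverse} is satisfied.

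Applying Theorem~\ref{thm:restriction_inverse} then yields $f'\in\mathcal{F}_{\sigma}(\Sigma)$ and $L\colon\Sigma^n\to\mathbb{C}$ of degree at most $d$ with $\norm{L}_2\leq1$ such that $\card{\inner{f}{Lf'}}\geq\eps'$; writing $f'(x)=\prod_i\chi_i(\sigma(x_i))=\chi'\circ\sigma(x)$ for $\chi'=(\chi_1,\ldots,\chi_n)\in\hat{H}^n$ gives exactly the conclusion of Theorem~\ref{thm:restriction_inverse_special}. For the quantitative bounds, Theorem~\ref{thm:restriction_inverse} gives $d={\sf poly}_{m,\alpha,\tau}(\log(1/\eps_0)/\rho)$ and $\eps'=2^{-{\sf poly}_{m,\alpha,\tau}(\log(1/\eps_0)/\rho)}$; since $\tau$ depends only on $m$ and $\alpha$, and $\log(1/\eps_0)=\max(\log(1/\eps),\log(1/\delta))\leq\log(1/\eps)\log(1/\delta)$ once $\eps_0$ is small enough, these fold into ${\sf poly}_{m,\alpha}(\log(1/\eps)\log(1/\delta)/\rho)$ as stated. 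The implication is routine: all of the difficulty is contained in Theorem~\ref{thm:restriction_inverse} and in Fact~\ref{fact:F_are_class}; the only points to check here are that the two random-restriction conventions and the two threshold parameters line up, which is immediate.
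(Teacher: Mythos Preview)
Your proposal is correct and follows exactly the paper's approach: apply Theorem~\ref{thm:restriction_inverse} with $\mathcal{F}(\Sigma)=\mathcal{F}_{\sigma}(\Sigma)$, invoking Fact~\ref{fact:F_are_class} to verify that this is a $\tau$-separated class of product functions with $\tau=\tau(m,\alpha)$. The paper's own proof is a single sentence to this effect; your additional care in reconciling the two-threshold hypothesis of Theorem~\ref{thm:restriction_inverse_special} with the single-threshold hypothesis of Theorem~\ref{thm:restriction_inverse} via $\eps_0=\min(\eps,\delta)$, and in matching the quantitative bounds, is all correct and simply fills in routine details the paper omits.
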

\begin{proof}
  Using Fact~\ref{fact:F_are_class}, the collection $\mathcal{F}_{\sigma}(\Sigma)$ satisfies the properties required by Theorem~\ref{thm:restriction_inverse},
  and so applying Theorem~\ref{thm:restriction_inverse} on it gives translates to the statement of Theorem~\ref{thm:restriction_inverse_special}.
\end{proof}

\subsection{Tools for the Proof of Theorem~\ref{thm:restriction_inverse}}
In this section we give the key ingredients for the proof of Theorem~\ref{thm:restriction_inverse}. Throughout this section, we denote by
$[R]^{\leq n}$ the set of vectors over $[R]$ of length at most $n$.
\subsubsection{The Direct Product Theorem}
First, we need a suitable direct product result. A function
$F\colon (\{0,1\}^n, \mu_{\rho}^{\otimes n}) \to [R]^{\leq n}$ is called a direct product function if there is $f\colon [n]\to[R]$ for which $F[A] = f|_{A}$ for all $A$.
Here and throughout, $\mu_{\rho}$ represents the $\rho$-biased distribution, in which $A\sim \mu_{\rho}^{\otimes n}$ is sampled by including each element
independently with probability $\rho$. The goal in direct product testing is to design a randomized test that queries a few locations at the function $F$, performs a test on them
and then accept or reject accordingly. The tester should have the following properties:
\begin{enumerate}
  \item \textbf{Completeness:} if $F$ is a direct product function, the tester must accept with probability $1$.
  \item \textbf{Soundness:} if the tester accepts with noticeable probability, then $F$ is somewhat correlated with a direct product function.
  By that, ideally one would like to say that if the tester accepts with probability $\eps$, then
  there is a direct product function $G$ such that $F[A] = G[A]$ with probability at least $\delta = \delta(\eps)>0$ over $A\sim\mu_{\rho}^{\otimes n}$.
  We will not be able to guarantee that type of soundness, and instead settle for something slightly weaker: with probability at least
  $\delta$ over the choice of $A$, we have that $\Delta(F[A],G[A])\leq r$, where $\Delta(x,y)$ measures the Hamming distance between two strings
  $x,y\in [R]^k$ and $r = r(\eps)\in\mathbb{N}$. In words, $F[A]$ and $G[A]$ agree on all but constantly many coordinates.\footnote{We remark
  that this is the best soundness one may hope to get in general. Indeed, for essentially all direct product testers and in particular the one
  we consider, it is the case that slight perturbations of legitimate direct product functions pass the test with noticeable probability. Namely,
  taking a direct product function $F$ and taking $F'$ such that $\Delta(F[A],F'[A])\leq r$ for at least $\eps$ fraction of the $A$'s, one typically
  has that $F'$ passes the direct product tester with probability at least $\eps^{2} 2^{-O(r)}$.}
\end{enumerate}

For our application we need to consider a particular test, which is also the most natural direct product tester one may think of.
Given parameters $\rho,\alpha,\beta\in (0,1)$ and oracle access to $F\colon (\{0,1\}^n, \mu_{\rho}^{\otimes n}) \to [R]^{\leq n}$
thought of as mapping $A$ to $F[A]\in [R]^A$, perform the following test, referred to as the ${\sf DP}(\rho,\alpha,\beta)$ test:
\begin{enumerate}
  \item Sample $C\sim \mu_{\rho\alpha}^{\otimes n}$ and sample $A, B \sim \mu_{\rho}^{\otimes n}$ independently conditioned on $A, B\supseteq C$.
  \item Sample $T\subseteq_{\beta} [n]$.
  \item Check that $F[A]|_{C\cap T} = F[B]|_{C\cap T}$.
\end{enumerate}

Roughly speaking, we sample $A$ and $B$ that have $\rho\alpha$ of their elements in common. Then, we choose a subset $T\subseteq_{\beta} [n]$ so
that $T\cap C$ contains roughly $\beta$ fraction of their shared elements, and check that the assignments $F[A]$ and $F[B]$ are consistent on $C\cap T$.
It is clear that the tester ${\sf DP}(\rho,\alpha,\beta)$ has the completeness property, and the following result addresses the soundness of ${\sf DP}(\rho,\alpha,\beta)$:
\begin{thm}\label{thm:DP_biased_version}
  For all $C>0$ there is $c>0$ such that the following holds for sufficiently large $n$ and $\eps\geq 2^{-n^{c}}$.
  Suppose that $\frac{1}{\log(1/\eps)^C}\leq \alpha,\beta\leq \frac{9}{10}$ and $\rho\in (0,1)$; then there are $r\in\mathbb{N}$ and $\eps'> 0$ such that if
  $F\colon (\{0,1\}^n, \mu_{\rho}^{\otimes n}) \to [R]^{\leq n}$ is an assignment as above that passes the ${\sf DP}(\rho,\alpha,\beta)$ test
  with probability at least $\eps$, then there exists $f\colon [n]\to[R]$ such that
  \[
  \Prob{A\sim \mu_{\rho}^{\otimes n}}{\Delta(F[A], f|_{A})\leq r}\geq \eps'.
  \]
  Quantitatively, we have $r = \rho^{-O_{C}(1)}\log(1/\eps)^{O_{C}(1)}$ and $\eps' = \eps^{O_{C}(\log(1/\rho)^2)}$.
\end{thm}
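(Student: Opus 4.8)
\textbf{Proof proposal for Theorem~\ref{thm:DP_biased_version}.}
The plan is to first \emph{denoise} the test, reducing it to the $\beta=1$ situation with a small Hamming slack, and then to prove a suitably robust, sparse, large-range version of the $1\%$-regime direct product theorem from \cite{BKMcsp3}. For the denoising step, observe that conditioned on the choice of $C,A,B$ in ${\sf DP}(\rho,\alpha,\beta)$, the probability over $T\subseteq_\beta[n]$ that the test accepts is exactly $(1-\beta)^{\Delta(F[A]|_C,\,F[B]|_C)}$, since each of the coordinates of $C$ on which $F[A]$ and $F[B]$ disagree must, independently, be excluded from $T$. Hence
\[
\Expect{C,A,B}{(1-\beta)^{\Delta(F[A]|_C,\,F[B]|_C)}}\geq \eps,
\]
and as $1-\beta\leq e^{-\beta}$, Markov's inequality gives that with probability at least $\eps/2$ over $(C,A,B)$ one has $\Delta(F[A]|_C,F[B]|_C)\leq r_0$, where $r_0:=\lceil\beta^{-1}\log(2/\eps)\rceil$; using $\beta\geq\log(1/\eps)^{-C}$ this is $r_0=\log(1/\eps)^{O_C(1)}$. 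It therefore suffices to prove the following ``robust'' statement: if $C\sim\mu_{\rho\alpha}^{\otimes n}$ and $A,B\sim\mu_\rho^{\otimes n}$ conditioned on containing $C$, and $\Pr[\Delta(F[A]|_C,F[B]|_C)\leq r_0]\geq\eps_1$, then there is $f\colon[n]\to[R]$ with $\Pr_A[\Delta(F[A],f|_A)\leq r]\geq\eps_2$ for $r,\eps_2$ of the required shape. This is exactly the content of \cite{BKMcsp3} in the case $r_0=0$, $R=2$ and $\alpha,\rho$ constant; the work is to reinstate it with (i) a nonzero Hamming slack $r_0$ in the hypothesis, (ii) arbitrary range $[R]$, and (iii) explicit, usable dependence on $\rho$ (and $\alpha$).

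For the core argument I would follow the standard $1\%$-regime template. By averaging over $C$ and a popular-string argument --- if a noticeable fraction of pairs agree up to $r_0$ errors on $C$, there is a fixed set $C_0$ and a fixed $\omega_0\in[R]^{C_0}$ with a noticeable fraction of $A\supseteq C_0$ satisfying $\Delta(F[A]|_{C_0},\omega_0)\leq r_0$; call these $A$ \emph{good} --- one fixes $C_0,\omega_0$ and the family of good sets. Define $f$ to be $\omega_0$ on $C_0$ and, on each $i\notin C_0$, a plurality vote of $F[A](i)$ over good $A\ni i$. The heart of the matter is a \emph{consistency} lemma: for all but a bounded number of coordinates $i$, almost every good $A\ni i$ has $F[A](i)=f(i)$. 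This is proved by a triangle/third-set argument: given good $A,A'$ both containing $i$, sample an auxiliary set $B$ linked to both via the overlap structure of the test, use robustness to conclude that $F[B]$ agrees with each of $F[A]$ and $F[A']$ on all but $O(r_0)$ coordinates of the respective overlap, and hence --- for most such $B$ and most $i$ --- $F[A](i)=F[B](i)=F[A'](i)$. A counting argument in the spirit of \cite{DinurS14,DinurFH19,BKMcsp3} then upgrades coordinate-wise consistency to the conclusion that $\Delta(F[A],f|_A)\leq r$ for a $\gtrsim\eps_2$-fraction of $A$, with $r={\sf poly}(r_0,1/\rho)$. General $R$ causes no essential difficulty here because every step is phrased in terms of agreement/disagreement \emph{patterns} rather than the values themselves, so that no union bound over $[R]$ is incurred; one only has to formulate the popular-string and plurality steps so as to avoid this.

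The remaining issue is the dependence on $\rho$: since $k=\rho n$ may be tiny, the overlaps used by the test are correspondingly weak and a direct execution of the above gives bounds that degrade uncontrollably with $\rho$. The plan is to bootstrap from the constant-$\rho$ regime by an amplification recursion: view a $\rho$-random subset as a $\tfrac12$-random subset of a $2\rho$-random subset, relate ${\sf DP}$ at sparsity $\rho$ to ${\sf DP}$ at sparsity $2\rho$, and iterate $O(\log(1/\rho))$ times until the sparsity is constant; each level costs a polynomial-in-$\eps$ factor in the success probability and an additive $O(1/\rho)$-type term in $r$, and the loss accumulated over the $O(\log(1/\rho))$ levels --- together with an internal $\log(1/\rho)$-type loss at each level --- produces the stated $r=\rho^{-O_C(1)}\log(1/\eps)^{O_C(1)}$ and $\eps'=\eps^{O_C(\log(1/\rho)^2)}$. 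The hypotheses $\eps\geq2^{-n^{c}}$ and $\alpha,\beta\geq\log(1/\eps)^{-C}$ enter precisely to keep $r_0$ polylogarithmic, to keep $\rho n$ a large integer, and to validate the concentration estimates in the counting and amplification steps.

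I expect the main obstacle to be the consistency lemma in the presence of \emph{both} the Hamming slack $r_0$ and the sparsity $\rho$: the third-set argument must guarantee that a random linked $B$ simultaneously is itself good, contains the coordinate $i$ in question, and is consistent with both $A$ and $A'$ at $i$ --- and, crucially, the $O(r_0)$ errors tolerated at each link must not accumulate coordinate-by-coordinate into a linear-in-$k$ error, which is what forces the careful covering bookkeeping and is the reason the final $r$ is polynomial in $\log(1/\eps)$ rather than linear. Maintaining independence of $R$ throughout is a comparatively minor but ever-present concern.
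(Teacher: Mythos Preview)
Your denoising step is correct and matches the paper's reduction (their Theorem~12.7): conditioning on $(C,A,B)$, the acceptance probability over $T$ is $(1-\beta)^{\Delta(F[A]|_C,F[B]|_C)}$, so Markov gives the slack $r_0=O(\beta^{-1}\log(1/\eps))$.

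The core argument, however, has a real gap at the \emph{globalization} step. Fixing a single $C_0$ and a popular string $\omega_0$ gives you a family of ``good'' sets --- but these are all supersets of $C_0$, which is an exponentially small fraction of $\mu_\rho^{\otimes n}$. Your plurality function $f$ is defined on all of $[n]$, but your consistency lemma only certifies $\Delta(F[A],f|_A)\leq r$ for good $A$; there is no mechanism to extend this to a random $A\sim\mu_\rho^{\otimes n}$, which almost never contains $C_0$. The triangle/third-set argument you describe stays entirely within the family of sets containing $C_0$ and cannot escape it. The paper handles exactly this by \emph{not} fixing a single seed: it builds local functions $g_{A_0,B_0}$ for a positive fraction of pairs $(A_0,B_0)$, shows via a ``modified test'' that $g_{A_0,B_0}$ and $g_{\tilde A_0,B_0'}$ are close when $\tilde A_0$ is a correlated perturbation of $A_0$, and then uses \emph{small-set expansion of a multi-slice graph} (their Lemma~12.3, proved via hypercontractivity on the multi-slice) to break the correlation and conclude that a single global $g$ works. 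This SSE step is the missing idea.

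Relatedly, your handling of $\rho$ is the wrong mechanism. The recursive doubling you propose is not well-posed: $F$ is only defined on $\rho$-biased sets, so you cannot directly run the test at sparsity $2\rho$, and it is unclear what ``relate ${\sf DP}$ at sparsity $\rho$ to ${\sf DP}$ at sparsity $2\rho$'' means operationally. In the paper, the $\eps^{O(\log(1/\rho)^2)}$ bound arises \emph{not} from recursion but from the quantitative SSE of the multi-slice graph: the expansion profile is $\phi(\mu)\geq 1-\mu^{\Omega(1/\log(1/q')^2)}$ with $q'\asymp\alpha\rho$, and this exponent propagates to the final success probability. The $r=\rho^{-O(1)}\log(1/\eps)^{O(1)}$ bound comes from the parameters $h,W,\tilde h,\tilde W$ tracked through the local-structure analysis (their hierarchy~(12.1)--(12.2)), not from an accumulation over levels. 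Finally, the extension to general $[R]$ is done in the paper by a random-hash reduction to the binary case (Section~12.2.5), which is cleaner than trying to make every step range-agnostic.
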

\begin{proof}
  Deferred to Section~\ref{sec:dp}.
\end{proof}

\subsubsection{Stability and Level $d$ Inequalities}
We record here a few basic notions from analysis of Boolean functions over product domains that we need; we refer the reader to~\cite{ODonnell}
for details. First is the notion of noise stability, for which we first define the standard noise operator:
\begin{definition}
  For a finite probability space $(\Sigma,\mu)$, $\rho\in [0,1]$ and $x\in \Sigma$, we define the distribution over $\rho$-correlated
  inputs with $x$, denoted by $\mathrm{T}_{\rho} x$, to be: take $y = x$ with probability $\rho$, otherwise sample $y\sim \mu$.
\end{definition}
As usual, we may associate with $\mathrm{T}_{\rho}$ an averaging operator $\mathrm{T}_{\rho} \colon L_2(\Sigma,\mu)\to L_2(\Sigma,\mu)$,
as well as tensor it to get an operator acting on $n$-variate functions.

\begin{definition}
  For a finite probability space $(\Sigma,\mu)$, $\rho\in [0,1]$ and $f\colon \Sigma^n\to\mathbb{C}$, we define the $\rho$-noise stability
  of $f$ as ${\sf Stab}_{\rho} = \inner{f}{\mathrm{T}_{\rho} f}$.
\end{definition}

The following fact asserts that if a function $f$ has significant noise stability, then it has significant weight on the low-levels.
\begin{fact}\label{fact:stability_to_weight}
  Suppose that $(\Sigma,\mu)$ is a finite probability space, $f\colon \Sigma^n\to\mathbb{C}$ is a function with $2$-norm at most $1$
  and ${\sf Stab}_{1-\eps}(f)\geq \delta$. Then $W_{\leq \frac{2\log(1/\delta)}{\eps}}[f]\geq \frac{\delta}{2}$.
\end{fact}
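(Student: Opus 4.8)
The plan is to pass to the Efron--Stein decomposition of $f$ and use that the standard noise operator acts diagonally on it. Write $f = \sum_{S\subseteq[n]} f^{=S}$ with $f^{=S}\in V_{=S}(\Sigma^n;\mu^{\otimes n})$, as in the preliminaries. Since $\mu$ is a stationary distribution of $\mathrm{T}_{1-\eps}$ and $\mathrm{T}_{1-\eps}$ is reversible, each space $V_{=S}$ is invariant under $\mathrm{T}_{1-\eps}^{\otimes n}$; moreover for this particular (noise) operator the action on $V_{=S}$ is exactly multiplication by $(1-\eps)^{\card{S}}$, because $\mathrm{T}_{1-\eps}$ fixes constants and contracts mean-zero functions by exactly $1-\eps$. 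Hence
\[
{\sf Stab}_{1-\eps}(f) = \inner{f}{\mathrm{T}_{1-\eps}^{\otimes n} f} = \sum_{S\subseteq[n]}(1-\eps)^{\card{S}}\norm{f^{=S}}_2^2 .
\]

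Next I would split this sum at $d := \frac{2\log(1/\delta)}{\eps}$. The contribution of terms with $\card{S}\le d$ is at most $\sum_{\card{S}\le d}\norm{f^{=S}}_2^2 = W_{\le d}[f]$, while the contribution of terms with $\card{S} > d$ is at most $(1-\eps)^{d}\sum_{\card{S}>d}\norm{f^{=S}}_2^2 \le (1-\eps)^{d}\big(\norm{f}_2^2 - W_{\le d}[f]\big)\le (1-\eps)^{d}\big(1 - W_{\le d}[f]\big)$, using $\norm{f}_2\le 1$. Since $1-\eps\le e^{-\eps}$ and $\eps d = 2\log(1/\delta)$, we get $(1-\eps)^{d}\le e^{-\eps d} = \delta^2$. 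Combining these bounds with the hypothesis ${\sf Stab}_{1-\eps}(f)\ge\delta$ yields $\delta \le W_{\le d}[f] + \delta^2\big(1 - W_{\le d}[f]\big)$, and rearranging gives
\[
W_{\le d}[f] \;\ge\; \frac{\delta - \delta^2}{1-\delta^2} \;=\; \frac{\delta}{1+\delta} \;\ge\; \frac{\delta}{2},
\]
which is exactly the claimed conclusion. (If one is content to assume $\delta\le\tfrac12$, the cruder estimate $W_{\le d}[f]\ge \delta-\delta^2\ge\delta/2$ already suffices and avoids carrying the $(1-W_{\le d}[f])$ factor.)

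There is no real obstacle here: this is a routine ``level-$d$ from noise stability'' estimate. The only point meriting a line of care is the exact diagonalization of $\mathrm{T}_{1-\eps}^{\otimes n}$ on the Efron--Stein spaces---the fact recorded earlier only gives the inequality $\norm{\mathrm{T}^{\otimes n}g}_2\le \lambda_2(\mathrm{T})^{\card{S}}\norm{g}_2$, whereas here we use that for the noise operator this holds with equality, i.e.\ $\mathrm{T}_{1-\eps}^{\otimes n}$ restricted to $V_{=S}$ is precisely $(1-\eps)^{\card{S}}\,\mathrm{Id}$. This is standard (see \cite{ODonnell}), and with it in hand the rest is the elementary two-line split above.
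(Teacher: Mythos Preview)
Your proof is correct and follows essentially the same approach as the paper's: expand in the Efron--Stein decomposition, use that $\mathrm{T}_{1-\eps}^{\otimes n}$ acts as $(1-\eps)^{|S|}$ on $V_{=S}$, and split the resulting sum at level $d=\frac{2\log(1/\delta)}{\eps}$. Your final rearrangement is in fact a bit tidier than the paper's (the paper bounds the high-degree contribution by $\delta^2\le\delta/2$, which tacitly uses $\delta\le\tfrac12$, whereas your inequality $W_{\le d}[f]\ge\delta/(1+\delta)\ge\delta/2$ holds for all $\delta\le 1$).
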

\begin{proof}
  Writing $f = \sum\limits_{S\subseteq [n]} f^{=S}$ according to the Efron Stein decomposition on $(\Sigma^n,\mu^{\otimes n})$ and noting
  that $\mathrm{T}_{\rho}^{\otimes n} f^{=S} = \rho^{\card{S}}f^{=S}$, we get that
  \[
  {\sf Stab}_{1-\eps}(f) = \sum\limits_{S\subseteq [n]} (1-\eps)^{\card{S}}\norm{f^{=S}}_2^2.
  \]
  The contribution from $\card{S} > 2\log(1/\delta)/\eps$ is at most
  \[
  (1-\eps)^{2\log(1/\delta)/\eps}\sum\limits_{\card{S} > 2\log(1/\delta)/\eps} \norm{f^{=S}}_2^2
  \leq e^{-2\log(1/\delta)} \norm{f}_2^2
  \leq \frac{\delta}{2},
  \]
  and the contribution from $\card{S}\leq 2\log(1/\delta)/\eps$ is at most $W_{\leq 2\log(1/\delta)/\eps}[f]$.
  It follows that $W_{\leq 2\log(1/\delta)/\eps}[f]+\frac{\delta}{2}\geq \delta$, and the proof is concluded by re-arranging.
\end{proof}

The following fact is known as the level $d$ inequality; it asserts that a Boolean function with small average may only have very small weight on low levels.
\begin{fact}\label{fact:level_d_inequality}
  Let $\Sigma$ be a finite alphabet and let $\mu$ be a distribution over $\Sigma$ in which the probability of each atom is at least $\alpha$.
  If $F\colon \Sigma^{n}\to\{0,1\}$ is a function with $\E_{\mu}[F] = s$, then $W_{\leq d}[F]\leq 2^{O_{\alpha}(d)}s^{3/2}$ for all $d$.
\end{fact}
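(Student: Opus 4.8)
Fact~\ref{fact:level_d_inequality}, the level-$d$ inequality for Boolean functions over a product domain $(\Sigma^n, \mu^{\otimes n})$: if $F \colon \Sigma^n \to \{0,1\}$ has $\E_\mu[F] = s$, then $W_{\leq d}[F] \leq 2^{O_\alpha(d)} s^{3/2}$.

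The plan is to follow the standard hypercontractivity-based proof of the level-$d$ inequality, adapted to a general product space whose atoms have probability at least $\alpha$. First I would recall that on $(\Sigma^n, \mu^{\otimes n})$ the noise operator $\mathrm{T}_\rho^{\otimes n}$ acts on the Efron–Stein decomposition by $\mathrm{T}_\rho^{\otimes n} F^{=S} = \rho^{|S|} F^{=S}$, so that $\|\mathrm{T}_\rho^{\otimes n} F\|_2^2 = \sum_S \rho^{2|S|} \|F^{=S}\|_2^2$. The key analytic input is the hypercontractive inequality for product spaces with $\alpha$-bounded atoms: there is $\rho_0 = \rho_0(\alpha) > 0$ (one may take $\rho_0 \asymp \alpha^{O(1)}$, or more precisely the $(2,q)$-hypercontractivity threshold for the worst two-point space with mass $\geq \alpha$) such that $\|\mathrm{T}_{\rho_0}^{\otimes n} G\|_4 \leq \|G\|_2$ for all $G$. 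This is the general-domain version of the Bonami–Beckner inequality and is standard (see, e.g., \cite{ODonnell} or Wolff's bound $\rho_0 \asymp \sqrt{\alpha}/\log(1/\alpha)$ type estimates); I would cite it rather than reprove it.

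Next I would run the truncation-plus-Hölder argument. Let $G = F^{\leq d} = \sum_{|S| \leq d} F^{=S}$ be the part of $F$ of level at most $d$, so $\|G\|_2^2 = W_{\leq d}[F]$. On one hand, applying hypercontractivity,
\[
\|\mathrm{T}_{\rho_0}^{\otimes n} G\|_4 \leq \|G\|_2 = W_{\leq d}[F]^{1/2},
\]
but also $\mathrm{T}_{\rho_0}^{\otimes n} G = \sum_{|S| \leq d} \rho_0^{|S|} F^{=S}$, so $\|\mathrm{T}_{\rho_0}^{\otimes n} G\|_4 \geq \rho_0^d \|G'\|_4$ where... — actually the cleaner route is: $\|G\|_2^2 = \sum_{|S| \leq d} \|F^{=S}\|_2^2$, and since $\rho_0^{|S|} \geq \rho_0^d$ for $|S| \leq d$, we get $\|\mathrm{T}_{\rho_0}^{\otimes n} G\|_2^2 \geq \rho_0^{2d} \|G\|_2^2$, i.e. $\|G\|_2 \leq \rho_0^{-d} \|\mathrm{T}_{\rho_0}^{\otimes n} G\|_2 \leq \rho_0^{-d}\|\mathrm{T}_{\rho_0}^{\otimes n} G\|_4$ is the wrong direction for $\ell_4$. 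The correct standard manipulation is instead to bound $\|\mathrm{T}_{\rho_0}^{\otimes n} G\|_4$ from below by $\rho_0^d$ times an $\ell_4$-norm of $G$ restricted to levels $\leq d$ — but $\mathrm{T}_{\rho_0}^{\otimes n}$ does not commute with $\ell_4$ nicely. So I would use the symmetric form: write $W_{\leq d}[F] = \langle F, G \rangle = \langle F, \mathrm{T}_{\rho_0}^{\otimes n} (\mathrm{T}_{\rho_0}^{-1})^{\otimes n}_{\leq d} G\rangle$ — cleanest is $\langle \mathrm{T}_{\rho_0}^{\otimes n} F, H\rangle$ where $H = \sum_{|S|\leq d}\rho_0^{-|S|} F^{=S}$, then by Hölder $\leq \|\mathrm{T}_{\rho_0}^{\otimes n} F\|_{4/3} \|H\|_4 \leq \|F\|_1^{?}\cdots$. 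Concretely: $W_{\leq d}[F] = \sum_{|S|\le d}\|F^{=S}\|_2^2 = \langle F, H\rangle$ with $H := \sum_{|S|\le d}\rho_0^{-|S|}\mathrm{T}_{\rho_0}^{\otimes n}F^{=S}=\mathrm{T}_{\rho_0}^{\otimes n}\big(\sum_{|S|\le d}\rho_0^{-|S|}F^{=S}\big)$; by hypercontractivity $\|H\|_4 \le \|\sum_{|S|\le d}\rho_0^{-|S|}F^{=S}\|_2 \le \rho_0^{-d}W_{\leq d}[F]^{1/2}$. By Hölder with exponents $(4/3, 4)$, $W_{\leq d}[F] = \langle F, H\rangle \le \|F\|_{4/3}\|H\|_4$. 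Since $F$ is $\{0,1\}$-valued, $\|F\|_{4/3} = \E[|F|^{4/3}]^{3/4} = \E[F]^{3/4} = s^{3/4}$. Combining, $W_{\leq d}[F] \le s^{3/4} \rho_0^{-d} W_{\leq d}[F]^{1/2}$, hence $W_{\leq d}[F]^{1/2} \le \rho_0^{-d} s^{3/4}$, i.e. $W_{\leq d}[F] \le \rho_0^{-2d} s^{3/2} = 2^{O_\alpha(d)} s^{3/2}$, as desired.

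The main obstacle — and really the only nontrivial point — is invoking the right form of hypercontractivity for an arbitrary product space with atoms of mass $\geq \alpha$, with a quantitatively correct $\rho_0 = \rho_0(\alpha)$ that only enters through the $2^{O_\alpha(d)}$ factor; everything else is the textbook Hölder manipulation. I would handle this by citing the general $(2,4)$-hypercontractive inequality from \cite{ODonnell} (the constant can be taken of the form $\rho_0 \gtrsim \alpha$ or better, which suffices since we only need $\rho_0^{-2d} = 2^{O_\alpha(d)}$). One small bookkeeping caveat: hypercontractivity as usually stated is for real-valued functions, but $F$ here is real-valued, so no issue arises; the low-level projection $\sum_{|S|\le d}\rho_0^{-|S|}F^{=S}$ is also real-valued, so the argument goes through verbatim.
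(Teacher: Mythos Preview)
Your proof is correct and is essentially the same as the paper's: both write $W_{\leq d}[F] = \langle F, F^{\leq d}\rangle$, apply H\"older with exponents $(4/3,4)$ to get $\|F\|_{4/3}\|F^{\leq d}\|_4 = s^{3/4}\|F^{\leq d}\|_4$, use hypercontractivity to bound $\|F^{\leq d}\|_4 \leq C(\alpha)^d \|F^{\leq d}\|_2$, and rearrange. The only cosmetic difference is that the paper invokes hypercontractivity directly in its ``low-degree'' form $\|g\|_4 \leq C(\alpha)^d\|g\|_2$ for degree-$d$ functions (citing \cite[Theorem~10.21]{ODonnell}), whereas you derive this from the noise-operator form $\|\mathrm{T}_{\rho_0}G\|_4 \leq \|G\|_2$ --- your function $H$ is in fact just $F^{\leq d}$, so once you simplify, the two arguments are identical.
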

\begin{proof}
  This is a standard consequence of the hypercontractive inequality, asserting that there is $C(\alpha)>0$ such that
  $\norm{g}_4\leq C(\alpha)^{d}\norm{g}_2$ for all functions $g$ of degree at most $d$; see~\cite[Theorem 10.21]{ODonnell}.
  Thus, by H\"{o}lder's inequality
  \[
  W_{\leq d}[F]
  =\inner{F^{\leq d}}{F}
  \leq \norm{F^{\leq d}}_{4}\norm{F}_{4/3}
  \leq C(\alpha)^{d} \norm{F^{\leq d}}_2\norm{F}_{4/3}
  =C(\alpha)^{d} \sqrt{W_{\leq d}[F]}\norm{F}_{4/3}
  \]
  and re-arranging gives
  $W_{\leq d}[F]\leq C(\alpha)^{2d} \norm{F}_{4/3}^2 = C(\alpha)^{2d}s^{3/2}$.
\end{proof}

\subsubsection{Some Averaging Arguments}
Our argument makes use of several standard probabilistic facts which we collect here. The first of which asserts that if we have functions $f$ and $g$
that are somewhat correlated and $g$ is $1$-bounded, then with noticeable probability they remain somewhat correlated after a random restriction.
\begin{fact}\label{fact:restrict_keeps_correlations}
  Let $\Sigma$ be a finite alphabet, $\mu$ be a distribution over $\Sigma$ and let $f,g\colon \Sigma^n\to\mathbb{C}$
  be functions such that $g$ is $1$-bounded and $\norm{f}_2\leq 1$. For all $I\subseteq [n]$, if $\card{\inner{f}{g}}\geq \eta$, then sampling $\tilde{x}\sim\mu^{\overline{I}}$,
  we have that
  \[
  \Prob{\tilde{x}\sim\mu^{\overline{I}}}{\card{\inner{f_{\overline{I}\rightarrow\tilde{x}}}{g_{\overline{I}\rightarrow \tilde{x}}}}\geq
  \frac{\eta}{2}}
  \geq \frac{\eta^2}{4}.
  \]
\end{fact}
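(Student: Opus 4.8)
The statement is a routine second-moment (Paley--Zygmund style) averaging argument, and the plan is as follows. First I would record the tower/Fubini identity that the inner product of $f$ and $g$ is exactly the average over the random restriction of the inner products of the restricted functions:
\[
\inner{f}{g}_{L_2(\Sigma^n;\mu^{\otimes n})}
= \Expect{\tilde{x}\sim\mu^{\overline{I}}}{\inner{f_{\overline{I}\rightarrow\tilde{x}}}{g_{\overline{I}\rightarrow\tilde{x}}}_{L_2(\Sigma^I;\mu^{I})}}.
\]
This holds because sampling $x\sim\mu^{\otimes n}$ can be done by first sampling $\tilde{x}\sim\mu^{\overline{I}}$ for the coordinates in $\overline{I}$ and then $x'\sim\mu^{I}$ for the coordinates in $I$, and $\mu^{\otimes n}$ is a product measure.

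Next, set $Z(\tilde{x}) = \inner{f_{\overline{I}\rightarrow\tilde{x}}}{g_{\overline{I}\rightarrow\tilde{x}}}$, a complex-valued random variable of $\tilde{x}$. From the identity above and the hypothesis $\card{\inner{f}{g}}\geq\eta$ we get $\Expect{\tilde{x}}{\card{Z(\tilde{x})}}\geq \card{\Expect{\tilde{x}}{Z(\tilde{x})}}\geq\eta$. For the second moment, I would apply the Cauchy--Schwarz inequality pointwise in $\tilde{x}$: since $g$ is $1$-bounded we have $\norm{g_{\overline{I}\rightarrow\tilde{x}}}_2\leq 1$, so $\card{Z(\tilde{x})}\leq \norm{f_{\overline{I}\rightarrow\tilde{x}}}_2\norm{g_{\overline{I}\rightarrow\tilde{x}}}_2\leq \norm{f_{\overline{I}\rightarrow\tilde{x}}}_2$, whence
\[
\Expect{\tilde{x}}{\card{Z(\tilde{x})}^2}\leq \Expect{\tilde{x}}{\norm{f_{\overline{I}\rightarrow\tilde{x}}}_2^2} = \norm{f}_2^2\leq 1,
\]
using the same tower identity applied to $\norm{f}_2^2$.

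Finally I would run the standard truncation argument: writing $p=\Prob{\tilde{x}}{\card{Z(\tilde{x})}\geq \eta/2}$, split $\Expect{}{\card{Z}} = \Expect{}{\card{Z}1_{\card{Z}<\eta/2}} + \Expect{}{\card{Z}1_{\card{Z}\geq\eta/2}}\leq \eta/2 + \Expect{}{\card{Z}1_{\card{Z}\geq\eta/2}}$, so that $\Expect{}{\card{Z}1_{\card{Z}\geq\eta/2}}\geq \eta/2$; then by Cauchy--Schwarz $\eta/2\leq \sqrt{\Expect{}{\card{Z}^2}}\sqrt{p}\leq\sqrt{p}$, giving $p\geq\eta^2/4$ as claimed.

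There is no real obstacle here — every step is elementary; the only point worth care is to bound the second moment $\Expect{}{\card{Z}^2}$ by $\norm{f}_2^2\leq 1$ via pointwise Cauchy--Schwarz and the $1$-boundedness of $g$, rather than attempting any uniform pointwise bound on $\card{Z}$ (which would not be available and is not needed).
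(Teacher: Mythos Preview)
Your proposal is correct and follows essentially the same argument as the paper: both use the tower identity to get $\Expect{\tilde{x}}{\card{Z(\tilde{x})}}\geq\eta$, split into the event $\{\card{Z}\geq\eta/2\}$ and its complement, bound the complement trivially by $\eta/2$, and bound the main term via Cauchy--Schwarz together with the second-moment estimate $\Expect{}{\card{Z}^2}\leq\norm{f}_2^2\leq 1$ coming from pointwise Cauchy--Schwarz and the $1$-boundedness of $g$.
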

\begin{proof}
  Denote the event in question by $E$, and note that
  \[
  \Expect{\tilde{x}}{\card{\inner{f_{\overline{I}\rightarrow\tilde{x}}}{g_{\overline{I}\rightarrow \tilde{x}}}}}
  \geq
  \card{\Expect{\tilde{x}}{\inner{f_{\overline{I}\rightarrow\tilde{x}}}{g_{\overline{I}\rightarrow \tilde{x}}}}}
  =\card{\inner{f}{g}}
  \geq \eta.
  \]
  On the other hand,
  \[
  \Expect{\tilde{x}}{\card{\inner{f_{\overline{I}\rightarrow\tilde{x}}}{g_{\overline{I}\rightarrow \tilde{x}}}}}
  =
  \underbrace{\Expect{\tilde{x}}{1_E\card{\inner{f_{\overline{I}\rightarrow\tilde{x}}}{g_{\overline{I}\rightarrow \tilde{x}}}}}}_{(\rom{1})}
  +
  \underbrace{\Expect{\tilde{x}}{1_{\overline{E}}\card{\inner{f_{\overline{I}\rightarrow\tilde{x}}}{g_{\overline{I}\rightarrow \tilde{x}}}}}}_{(\rom{2})},
  \]
  and we upper bound each term as follows. For $(\rom{1})$ we have by Cauchy-Schwarz that
  \[
    \card{(\rom{1})}
    \leq \sqrt{\Prob{}{E}}
    \sqrt{\Expect{\tilde{x}}{\card{\inner{f_{\overline{I}\rightarrow\tilde{x}}}{g_{\overline{I}\rightarrow \tilde{x}}}}^2}}
    \leq\sqrt{\Prob{}{E}}
    \sqrt{\Expect{\tilde{x}}{\norm{f_{\overline{I}\rightarrow\tilde{x}}}_2^2}}
    =\sqrt{\Prob{}{E}}\norm{f}_2
    \leq\sqrt{\Prob{}{E}}.
  \]
  Here, we used the fact that $g$ is $1$-bounded. For $(\rom{2})$ we have by definition of $\overline{E}$ we have
  $\card{(\rom{2})}\leq \frac{\eta}{2}$. Combining, we ge that $\eta\leq \sqrt{\Prob{}{E}} + \frac{\eta}{2}$, hence $\Prob{}{E}\geq \frac{\eta^2}{4}$.
\end{proof}

The second fact asserts that if $X,Y$ are independent random variables and $E(X,Y)$ is
an event with noticeable probability, then sampling $x_1,\ldots,x_k\sim X$ and $y_1,\ldots,y_{\ell}\sim Y$ independently we have that
all of $E(x_i,y_j)$ occur with noticeable probability.
\begin{fact}\label{fact:holder_trick}
  Let $X, Y$ be independent random variables and let $E$ be an event depending only on $X,Y$, and suppose that $\Prob{}{E}\geq \delta$.
  Then
  \[
  \Prob{\substack{x_1,\ldots,x_k\sim X\\ y_1,\ldots,y_{\ell}\sim Y}}{\bigcap_{i,j} E(x_i,y_j)}\geq \delta^{k\ell}.
  \]
\end{fact}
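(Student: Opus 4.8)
The final statement to prove is \textbf{Fact~\ref{fact:holder_trick}}: if $X,Y$ are independent random variables and $E$ is an event depending only on $X,Y$ with $\Prob{}{E}\geq \delta$, then sampling $x_1,\ldots,x_k\sim X$ and $y_1,\ldots,y_\ell\sim Y$ all independently, we have $\Prob{}{\bigcap_{i,j}E(x_i,y_j)}\geq \delta^{k\ell}$.

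The plan is to proceed by a double application of H\"older's inequality (or, equivalently, the power-mean/Jensen inequality applied to the $k$-th and $\ell$-th powers), exploiting the independence structure. First I would write $\Prob{}{E} = \Expect{x\sim X}{p(x)}$ where $p(x) = \Prob{y\sim Y}{E(x,y)}$; this uses the independence of $X$ and $Y$ together with Fubini. Then the target probability factors, using the mutual independence of all $x_i$'s and $y_j$'s, as
\[
\Prob{\substack{x_1,\ldots,x_k\\ y_1,\ldots,y_\ell}}{\bigcap_{i,j}E(x_i,y_j)}
= \Expect{x_1,\ldots,x_k}{\prod_{i=1}^{k}\Expect{y_1,\ldots,y_\ell}{\prod_{j=1}^{\ell} 1_{E(x_i,y_j)}}}
= \Expect{x_1,\ldots,x_k}{\prod_{i=1}^{k} p(x_i)^{\ell}},
\]
where in the last step I used that, conditioned on $x_1,\ldots,x_k$, the indicators $1_{E(x_i,y_j)}$ over $j=1,\ldots,\ell$ are independent (since the $y_j$ are independent), so their expected product is $\prod_j \Expect{y_j}{1_{E(x_i,y_j)}} = p(x_i)^\ell$. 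Finally, since the $x_i$ are i.i.d., $\Expect{x_1,\ldots,x_k}{\prod_{i=1}^k p(x_i)^\ell} = \left(\Expect{x\sim X}{p(x)^\ell}\right)^{k}$, and by Jensen's inequality (convexity of $t\mapsto t^\ell$ for $\ell\geq 1$) this is at least $\left(\Expect{x\sim X}{p(x)}\right)^{\ell k} = \Prob{}{E}^{k\ell}\geq \delta^{k\ell}$.

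There is no real obstacle here; the only thing to be careful about is bookkeeping the independence correctly at each of the two stages (first peeling off the $y_j$'s conditioned on the $x_i$'s, then peeling off the $x_i$'s), and noting that both Jensen applications go in the favorable direction because the relevant exponents $\ell$ and $k$ are at least $1$. One could alternatively phrase the entire argument as a single invocation of the generalized H\"older inequality, but the iterated conditioning above is the cleanest presentation and requires no extra hypotheses beyond $k,\ell\geq 1$ (for $k=0$ or $\ell=0$ the statement is trivially true with the empty intersection having probability $1\geq\delta^0$).
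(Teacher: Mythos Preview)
There is a genuine error in your displayed equality
\[
\Prob{\substack{x_1,\ldots,x_k\\ y_1,\ldots,y_\ell}}{\bigcap_{i,j}E(x_i,y_j)}
= \Expect{x_1,\ldots,x_k}{\prod_{i=1}^{k}\Expect{y_1,\ldots,y_\ell}{\prod_{j=1}^{\ell} 1_{E(x_i,y_j)}}}.
\]
To pull the product over $i$ outside the inner expectation you would need the random variables $\prod_{j}1_{E(x_i,y_j)}$, for different $i$'s, to be independent once the $x$'s are fixed. They are not: for a fixed $j$, the indicators $1_{E(x_1,y_j)},\ldots,1_{E(x_k,y_j)}$ all depend on the \emph{same} $y_j$. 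A concrete counterexample: take $X,Y$ uniform on $\{0,1\}$, let $E(0,y)$ always hold and $E(1,y)$ hold iff $y=0$, and take $k=2,\ell=1$. The true probability is $\Expect{y}{q(y)^2}=5/8$ where $q(y)=\Prob{x}{E(x,y)}$, while your formula gives $(\Expect{x}{p(x)})^2=9/16$.

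The fix is to factor in the order that the independence actually supports: first over $j$ (since the $y_j$ are i.i.d.), obtaining
\[
\Expect{x_1,\ldots,x_k}{\left(\Expect{y}{\prod_{i=1}^k 1_{E(x_i,y)}}\right)^{\ell}},
\]
then apply Jensen to the $\ell$-th power to bring the outer expectation inside, then factor the resulting $\Expect{x_1,\ldots,x_k}{\prod_i 1_{E(x_i,y)}}$ over $i$ (since the $x_i$ are i.i.d.), and finally apply Jensen to the $k$-th power. This is exactly the paper's argument. The key structural point you missed is that a Jensen step must occur \emph{between} the two factorizations, not only at the end.
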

\begin{proof}
  We have that $\Expect{x\sim X, y\sim Y}{1_{E(x,y)}} = \Prob{}{E}\geq \delta$, so raising to the power $k$ and using H\"{o}lder's inequality gives
  \begin{align*}
  \delta^{k}
  \leq
  \Expect{x\sim X, y\sim Y}{1_{E(x,y)}}^k
  =
  \Expect{y\sim Y}{\Expect{x\sim X}{1_{E(x,y)}}}^k
  &\leq
  \Expect{y\sim Y}{\Expect{x\sim X}{1_{E(x,y)}}^k}\\
  &=
  \Expect{y\sim Y}{\Expect{x_1,\ldots,x_k\sim X}{1_{\bigcap_{i=1}^{k}E(x_i,y)}}}.
  \end{align*}
  Raising to the power $\ell$ and using H\"{o}lder's inequality again gives
  \begin{align*}
  \delta^{k\ell}
  \leq
  \Expect{y\sim Y}{\Expect{x_1,\ldots,x_k\sim X}{1_{\bigcap_{i=1}^{k}E(x_i,y)}}}^{\ell}
  &=
  \Expect{x_1,\ldots,x_k\sim X}{\Expect{y\sim Y}{1_{\bigcap_{i=1}^{k}E(x_i,y)}}}^{\ell}\\
  &\leq
  \Expect{x_1,\ldots,x_k\sim X}{\Expect{y\sim Y}{1_{\bigcap_{i=1}^{k}E(x_i,y)}}^{\ell}}\\
  &=\Expect{x_1,\ldots,x_k\sim X}{\Expect{y_1,\ldots,y_{\ell}\sim Y}{1_{\bigcap_{i,j}E(x_i,y_j)}}},
  \end{align*}
  as desired.
\end{proof}
\subsection{Proof of Theorem~\ref{thm:restriction_inverse}}\label{sec:proof_of_rest_inverse}
We now proceed to the proof of Theorem~\ref{thm:restriction_inverse}. The argument we present is similar to an argument from~\cite{BKMcsp3}
with some differences. In the setting therein, the class of product functions was the collection of all multiplicative characters over $\mathbb{F}_2$
and the underlying measure was uniform, hence any two distinct product functions were orthogonal. Such orthogonality properties were
used multiple times, and most importantly it implies that a given function $f\colon \Sigma^n\to\mathbb{C}$ with $2$-norm at most $1$ could
be $\eta$-correlated with at most $\frac{1}{\eta^2}$ product functions. In the current setting we do not have these orthogonality properties.
To circumvent that, we consider nets, which are small collections of product functions that in some sense capture all of the product functions
correlated with $f$. Stated simply, while there could be many product functions that are correlated with $f$ (their number
could depend on the dimension $n$ for example), we argue that one could choose a short list of product functions that are correlated with $f$,
so that any other product function that is correlated with $f$ must be close to a product function from the list.
With this change, we can use the main ideas from argument in~\cite{BKMcsp3} (which still requires some non-trivial but more minor adaptations).

\skipi
Throughout this section we will use the following notation:
\begin{definition}
  For a product function $p\colon\Sigma^{n}\to\mathbb{C}$ given as $p(x_1,\ldots,x_n) = \prod\limits_{i=1}^{n}p_i(x_i)$ and $T\subseteq [n]$,
  we define $p|_T\colon\Sigma^{T}\to\mathbb{C}$ by $p|_{T}(y) = \prod\limits_{i\in T}p_i(y_i)$.
\end{definition}
Next, we define the action of this operation on a collection of functions in the natural way.
\begin{definition}
  For a collection of product functions $W\subseteq\sett{p\colon\Sigma^{n}\to\mathbb{C}}{p\text{ is a product function}}$
  and $T\subseteq [n]$ we define $W|_{T} = \sett{p|_T}{p\in W}$.
\end{definition}

We also note that by the $\tau$-separatedness of $\mathcal{F}$, it follows that the number of univariate functions in $\mathcal{F}(\Sigma)$ is
at most some finite number $R = R(\tau)$, and we fix this $R$ henceforth. We will identify between $[R]$ and univariate functions in $\mathcal{F}(\Sigma)$,
and thus also between $n$-variate product functions in $\mathcal{F}(\Sigma)$ and $[R]^n$.

\subsubsection{The Net of Product Functions and the Symbolic Distance}
Fix $\Sigma$, a distribution $\mu$ over $\Sigma$ and a class $\mathcal{F} = \mathcal{F}(\Sigma)$ as in Theorem~\ref{thm:restriction_inverse}.
We will consider various domains, and to make the notations more precise we will denote by $\mathcal{F}(\Sigma,I)$ the subset of $\mathcal{F}$
consisting of functions whose domain is $\Sigma^{I}$.

For a function $f\colon (\Sigma^n,\mu^{n})\to\mathbb{C}$ and parameter $\eps>0$, denote
\[
{\sf List}_{\eps}[f] = \sett{p\in\mathcal{F}}{\card{\inner{f}{p}}\geq \eps}.
\]
As discussed earlier, there need not be a bound on the size of ${\sf List}_{\eps}[f]$ in terms of $\eps$; it may well be the case that
its size grows with the dimension $n$. Nevertheless, the following lemma states that within ${\sf List}_{\eps}[f]$, we may find a short list whose size
is bounded in terms of $\eps$ that essentially captures all of ${\sf List}_{\eps}[f]$.
\begin{lemma}\label{lem:net_of_prods}
  Fix $\Sigma$, a distribution $\mu$ over $\Sigma$ and a class $\mathcal{F} = \mathcal{F}(\Sigma)$ as in Theorem~\ref{thm:restriction_inverse},
  let $f\colon (\Sigma^n,\mu^{n})\to\mathbb{C}$ be such that $\norm{f}_2\leq 1$, and let $\eps,\delta>0$ be parameters satisfying that $\delta<\eps^2$.
  Then one may find ${\sf ShortList}_{\eps,\delta}[f]\subseteq {\sf List}_{\eps}[f]$ such that:
  \begin{enumerate}
    \item $\card{{\sf ShortList}_{\eps,\delta}[f]}\leq \frac{1}{\eps^2-\delta}$.
    \item For all $p\in {\sf List}_{\eps}[f]$ there is $p'\in {\sf ShortList}_{\eps,\delta}[f]$
    such that $\card{\inner{p}{p'}}\geq \delta$.
  \end{enumerate}
\end{lemma}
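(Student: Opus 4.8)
The plan is to construct ${\sf ShortList}_{\eps,\delta}[f]$ by a greedy procedure: repeatedly pick an element of ${\sf List}_{\eps}[f]$ that is ``far'' (in the sense of $\card{\inner{\cdot}{\cdot}}<\delta$) from everything already chosen, and stop when no such element exists. Formally, initialize $S_0=\emptyset$. Having built $S_j=\{p_1,\dots,p_j\}$, if there is some $p\in{\sf List}_{\eps}[f]$ with $\card{\inner{p}{p_i}}<\delta$ for all $i\le j$, set $p_{j+1}$ to be any such $p$ and $S_{j+1}=S_j\cup\{p_{j+1}\}$; otherwise halt and output $S_j$. Termination of this process (which is what we must prove) will give the size bound, and the stopping criterion is exactly the covering property in item~2: when the process halts at $S_j$, every $p\in{\sf List}_{\eps}[f]$ fails the selection condition, i.e.\ $\card{\inner{p}{p'}}\ge\delta$ for some $p'\in S_j$.

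The main work is bounding $\card{S_j}$, and the key step is a second-moment (or ``near-orthogonality'') argument. Suppose $S=\{p_1,\dots,p_k\}\subseteq{\sf List}_{\eps}[f]$ satisfies $\card{\inner{p_a}{p_b}}<\delta$ for all $a\ne b$ (this holds for the greedily chosen set by construction). For each $a$ choose a unit-modulus scalar $\theta_a$ so that $\theta_a\inner{f}{p_a}=\card{\inner{f}{p_a}}\ge\eps$, and set $v=\sum_{a=1}^k \theta_a\overline{\inner{f}{p_a}}\,p_a/|\inner{f}{p_a}|$ — more cleanly, consider $w=\sum_{a=1}^k c_a p_a$ where the $c_a$ are phases chosen so that $\inner{f}{w}=\sum_a \card{\inner{f}{p_a}}\ge k\eps$. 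On the other hand,
\[
\norm{w}_2^2 = \sum_{a,b} c_a\overline{c_b}\inner{p_a}{p_b} = k + \sum_{a\ne b} c_a\overline{c_b}\inner{p_a}{p_b} \le k + k(k-1)\delta \le k^2\delta + k.
\]
Since $\norm{f}_2\le 1$, Cauchy--Schwarz gives $k\eps \le \inner{f}{w}\le \norm{f}_2\norm{w}_2\le \sqrt{k^2\delta+k}$, hence $k^2\eps^2 \le k^2\delta + k$, i.e.\ $k(\eps^2-\delta)\le 1$, so $k\le \frac{1}{\eps^2-\delta}$ (using $\delta<\eps^2$). This bounds the size of every stage $S_j$, so the greedy process terminates after at most $\lfloor 1/(\eps^2-\delta)\rfloor$ steps, giving item~1, and the halting criterion gives item~2.

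I do not expect a genuine obstacle here: this is the standard ``list-decoding via Cauchy--Schwarz'' packing argument, and the only point requiring a small amount of care is the phase bookkeeping — ensuring that a single choice of phases $\{c_a\}$ simultaneously makes $\inner{f}{w}$ real and large while keeping the cross terms $\card{\inner{p_a}{p_b}}$ controlled; but since we only need the \emph{upper} bound $\card{\sum_{a\ne b}c_a\overline{c_b}\inner{p_a}{p_b}}\le k(k-1)\delta$ via the triangle inequality, any choice of phases works for the norm estimate, and we pick them to make $\inner{f}{w}=\sum_a\card{\inner{f}{p_a}}$. One should also note that $\mathcal{F}$ being a set of $1$-bounded product functions guarantees $\norm{p_a}_2\le 1$, and in fact $\norm{p_a}_2=1$ when the $p_a$ come from $\mathcal{F}_\sigma(\Sigma)$, which is all that is used. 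No appeal to $\tau$-separatedness or to the direct-product machinery is needed for this particular lemma; those enter only later in the proof of Theorem~\ref{thm:restriction_inverse}.
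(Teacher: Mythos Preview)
Your proposal is correct and follows essentially the same approach as the paper: a greedy construction followed by the same Cauchy--Schwarz/near-orthogonality bound $k\eps\le\sqrt{k+k(k-1)\delta}$, yielding $k\le 1/(\eps^2-\delta)$. The only cosmetic difference is that the paper writes the phases as $\theta_j^{-1}$ rather than your $c_a$, but the argument is identical.
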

\begin{proof}
  The proof is by a greedy algorithm. Starting with ${\sf ShortList}_{\eps,\delta}[f] = \emptyset$, so long as there is $p\in {\sf List}_{\eps}[f]$
  that is at most $\delta$-correlated with all functions in ${\sf ShortList}_{\eps,\delta}[f]$, we add it to ${\sf ShortList}_{\eps,\delta}[f]$.

  We show that the above process terminates after less than $k = \frac{1}{\eps^2-\delta}$ steps. Indeed, otherwise we would be able to find
  $p_1,\ldots,p_{k+1}\in {\sf List}_{\eps}[f]$
  whose pairwise correlations are at most $\delta$. Write $\inner{f}{p_j} = \theta_j \rho_j$ where $\rho_j \geq 0$ and $\theta_j$ has absolute value $1$;
  then we have that $\rho_j\geq\eps$. We get that
  \[
  k\eps
  \leq \sum\limits_{j=1}^{k}\theta_j^{-1}\inner{f}{p_j}
  =\inner{f}{\sum\limits_{j=1}^{k}\theta_j^{-1} p_j}
  \leq\norm{f}_2\norm{\sum\limits_{j=1}^{k}\theta_j^{-1} p_j}_2.
  \]
  Using $\norm{f}_2\leq 1$ and the fact that
  \[
  \norm{\sum\limits_{j=1}^{k+1}\theta_j^{-1} p_j}_2^2
  =\sum\limits_{j=1}^{k}\norm{p_j}_2
  +\sum\limits_{j\neq j'}\theta_{j}^{-1}\theta_{j'}^{-1}\inner{p_j}{p_{j'}}
  \leq k + k(k-1)\delta
  \]
  gives $k\eps\leq \sqrt{k+k^2\delta}$, and simplifying finishes the proof.
\end{proof}
Note that the short-list found in Lemma~\ref{lem:net_of_prods} is not unique, and indeed there may be several choices for it. We will want to
think of some canonical short-list that is associated with a given function $f$. This may be achieved in several ways; for instance, we may
fix a total ordering among all product functions, and then consider some induced ordering (say, lexicographic ordering) induced on collections of functions,
and define the canonical short list of $f$ to be the short list which is first according to this ordering. Thus, henceforth when we write
${\sf ShortList}_{\eps,\delta}[f]$, we refer to the canonical short list of $f$.

Lemma~\ref{lem:net_of_prods} gives us a rather satisfactory answer in the sense that we get a short list of product functions that, in a sense, encapsulates
within it the entire list of product functions correlated with $f$. We will want to imagine this short list as the center of Hamming balls, and of this property
as saying that any product function correlated with $f$ is inside a ball of small radius around some product function from the short list. To facilitate that,
we define the symbolic distance between product functions.

\begin{definition}
  Fix $\Sigma$, a distribution $\mu$ over $\Sigma$ and a class $\mathcal{F} = \mathcal{F}(\Sigma)$ as in Theorem~\ref{thm:restriction_inverse}.
  For two functions $p,p'\in\mathcal{F}$ over $n$-variables $p,p'\colon \Sigma^n\to\mathbb{C}$ written as $p(x) = \prod\limits_{j=1}^{n}p_j(x_j)$
  and $p'(x) = \prod\limits_{j=1}^{n}p'_j(x_j)$, we define the symbolic distance between $p$ and $p'$ as:
  \[
  \Delta_{{\sf symbolic}}(p,p') = \card{\sett{j\in [n]}{p_j\neq p_j'}}.
  \]
\end{definition}

The following lemma asserts that products functions that are correlated are close in symbolic distance.
\begin{lemma}\label{lem:corr_to_symbolic}
  Fix $\Sigma$, a distribution $\mu$ over $\Sigma$ and a class $\mathcal{F} = \mathcal{F}(\Sigma)$ as in Theorem~\ref{thm:restriction_inverse}
  which is $\tau$-separated. Then for any $n\in\mathbb{N}$ and any $n$-variable functions $p,p'\in\mathcal{F}$ we have that
  \[
  \card{\inner{p}{p'}}\leq (1-\tau)^{\Delta_{{\sf symbolic}}(p,p')}.
  \]
\end{lemma}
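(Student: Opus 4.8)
The plan is to exploit the product structure of both the functions and the underlying measure: because $\mu^{\otimes n}$ is a product measure, the inner product $\inner{p}{p'}$ factorizes into a product of one-dimensional inner products, and each factor is then controlled by the $\tau$-separatedness hypothesis. Concretely, I would write $p(x) = \prod_{j=1}^{n} p_j(x_j)$ and $p'(x) = \prod_{j=1}^{n} p_j'(x_j)$ with $1$-bounded univariate components. Since $\inner{p}{p'} = \Expect{x\sim\mu^{\otimes n}}{p(x)\overline{p'(x)}}$ and the coordinates are independent under $\mu^{\otimes n}$, the expectation of the product equals the product of the expectations, giving
\[
\inner{p}{p'} = \prod_{j=1}^{n} \inner{p_j}{p_j'}_{L_2(\Sigma;\mu)},
\qquad\text{so}\qquad
\card{\inner{p}{p'}} = \prod_{j=1}^{n} \card{\inner{p_j}{p_j'}}.
\]

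Next I would bound each factor by a case analysis on whether $p_j$ equals $p_j'$. When $p_j = p_j'$ we simply have $\card{\inner{p_j}{p_j'}} = \norm{p_j}_2^2 \leq 1$, using $1$-boundedness. When $p_j \neq p_j'$ I would invoke the $\tau$-separatedness of $\mathcal{F}(\Sigma)$ applied to the univariate functions $p_j, p_j'$ to conclude $\card{\inner{p_j}{p_j'}} \leq 1-\tau$. Here one should note that the univariate components of a function in $\mathcal{F}$ belong to $\mathcal{F}(\Sigma)$ up to a multiplicative constant of absolute value $1$ (which does not affect $\card{\inner{\cdot}{\cdot}}$): this is exactly the content of the closure of $\mathcal{F}(\Sigma)$ under restrictions, obtained by freezing all coordinates of $p$ (respectively $p'$) other than $j$ and choosing the frozen values so that the leftover scalar has modulus $1$; in the degenerate case where no such choice exists, some component has $\norm{\cdot}_2 < 1$ and Cauchy--Schwarz already yields a bound strictly below $1-\tau$ for $\tau$ small.

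Finally I would multiply the bounds over $j$: by definition exactly $\Delta_{{\sf symbolic}}(p,p')$ of the indices have $p_j \neq p_j'$, each contributing a factor at most $1-\tau$, and the remaining indices contribute a factor at most $1$, yielding $\card{\inner{p}{p'}} \leq (1-\tau)^{\Delta_{{\sf symbolic}}(p,p')}$. I do not expect any real obstacle here; the only point requiring mild care is verifying that $\tau$-separatedness is legitimately applicable to the univariate components, as described above, while the factorization identity and the concluding estimate are immediate.
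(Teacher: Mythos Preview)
Your proposal is correct and follows essentially the same argument as the paper: factorize $\inner{p}{p'}$ over coordinates using the product measure, then bound each factor via $\tau$-separatedness. The paper simply takes for granted that the univariate components $p_j,p_j'$ lie in $\mathcal{F}(\Sigma)$ (this is implicit in the identification of $n$-variate functions in $\mathcal{F}$ with $[R]^n$ made just after Lemma~\ref{lem:net_of_prods}), so your extra paragraph justifying this via closure under restrictions---and in particular the ``degenerate case'' discussion---is not needed in context; the paper also writes that the equal-coordinate factors are exactly $1$ rather than your cautious $\leq 1$, again relying on the same implicit convention.
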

\begin{proof}
  Writing $p(x) = \prod\limits_{j=1}^{n}p_j(x_j)$ and $p'(x) = \prod\limits_{j=1}^{n}p'_j(x_j)$, we have that
  \[
  \card{\inner{p}{p'}} = \card{\prod\limits_{j=1}^{n}\inner{p_j}{p_j'}}=\prod\limits_{j=1}^{n}\card{\inner{p_j}{p_j'}}.
  \]
  For $j$ such that $p_j = p_j'$ the inner product is $1$, and for any other $j$ the absolute value of the inner product is at most $1-\tau$.
  The result follows.
\end{proof}

\subsubsection{Local Structure}
Fix $\Sigma$, $\mu$ distribution over $\Sigma$, $\mathcal{F} = \mathcal{F}(\Sigma)$, $f$ and the parameters in Theorem~\ref{thm:restriction_inverse}.
We will use the parameters
\[
0\ll c_4\ll c_3\ll c_2 \ll c_1\ll s,C^{-1}\ll \tau, m^{-1},\alpha\leq 1,
\]
take
\[
0\leq \eta = \eps^{\frac{1}{c_3\rho}} \leq \zeta = \eps^{\frac{1}{c_2\rho}}\leq\xi = \eps^{1/c_1} \leq 1,
\]
and
\begin{equation}\label{eq:rest_inverse_eps'}
  \eps' = \zeta^{-\log(1/\rho)^2/c_4}.
\end{equation}

Throughout, we denote by $I\subseteq [n]$ a set which is picked by including each $i\in [n]$ in it with probability $\rho$. For $I\subseteq [n]$
and $z\in \Sigma^I$ define
\[
W_{I,z}= {\sf List}_{\eps}[f_{\overline{I}\rightarrow z}],
\qquad\qquad
\tilde{W}_{I,z}= {\sf List}_{\eps/2}[f_{\overline{I}\rightarrow z}],
\]
and
\[
SW_{I,z}= {\sf ShortList}_{\eps,\eps^2/10}[f_{\overline{I}\rightarrow z}],
\qquad\qquad
\tilde{SW}_{I,z}= {\sf ShortList}_{\eps/2, \eps^2/100}[f_{\overline{I}\rightarrow z}].
\]

We will now be interested in looking at $I'\subseteq_{\rho/2}[n]$. An equivalent way of sampling such $I'$ is by first taking
$I\subseteq_{\rho}[n]$, and then taking $I'\subseteq_{1/2}I$. As for restrictions, we can also first take $z\sim {\nu'}^{\overline{I}}$,
then $z'\sim \nu^{I\setminus I'}$ and concatenate them to get $f_{\overline{I'}\rightarrow z\circ z'}$.

For fixed $I$ and $z$, if we have that $p(x) = \prod\limits_{i\in I}p_i(x_i)$ is in ${SW}_{I,z}$, then
by Fact~\ref{fact:restrict_keeps_correlations} we get that
\[
\card{\inner{f_{\overline{I'}\rightarrow z\circ z'}}{p|_{I\setminus I'\rightarrow z'}}}\geq \frac{\eps}{2}
\]
with probability at least $\frac{\eps^2}{4}$, in which case we get that $p|_{I'}\in \tilde{W}_{I',z\circ z'}$,
so that in a formula the above says that
\begin{equation}\label{eq:restrict_inverse_1}
\cProb{\substack{I, I'\\z, z'}}{p\in {SW}_{I,z}}{p|_{I'}\in \tilde{W}_{I',z\circ z'}}\geq\frac{\eps^2}{4}.
\end{equation}
This gives us some interesting information: while the identity of the original product function $p$ may depend on both $I$ and $z$,
the function $p|_{I'}$ depends only on $I, I'$ and $z$ and not on $z'$. This suggests that on average, the identity of $p$ itself should
also only depend on $I$ (the set of live variables) and not really on the value that we restrict outside them. To formalize this we use~\eqref{eq:restrict_inverse_1}
in conjunction with Fact~\ref{fact:holder_trick} to get that
\begin{equation}\label{eq:restrict_inverse_2}
\cProb{\substack{I, z \\ I',I''\subseteq_{1/2} I\\ z_1', z_1'',z_2',z_2''}}
{p\in {SW}_{I,z}}
{p|_{I'}\in \tilde{W}_{I',z\circ z_1'}\cap \tilde{W}_{I',z\circ z_2'},
~~~~~
p|_{I''}\in \tilde{W}_{I'',z\circ z_1''}\cap  \tilde{W}_{I'',z\circ z_2''}}\geq\frac{\eps^{16}}{4^4}.
\end{equation}
We note that in particular, this inequality means that with noticeable probability,
the short-lists of $I',z\circ z_1'$ and $I',z\circ z_2'$ contain two elements that have symbolic distance at most $C\cdot \log(1/\eps)$.
Indeed, as $p|_{I''}$ is in the list of these two restrictions, it follows by definition of the short-lists that for each one of this restrictions,
the corresponding short list contains a product function that has correlation at least $\eps^2/100$ with it, at which point we can use Lemma~\ref{lem:corr_to_symbolic}
to say that $p|_{I''}$ is close in symbolic distance to an element in each short list.

Take $T\subseteq [n]$ randomly by including each element in it with probability $\beta=\frac{1}{C\log(1/\eps)}$, and
define the following events:
\begin{enumerate}
  \item Let $E_1(p)$ be the event that $p|_{I'\cap T}\in \tilde{SW}_{I',z\circ z_1'}|_{T}\cap \tilde{SW}_{I',z\circ z_2'}|_{T}$.
  \item Let $E_2(p)$ be the event that $p|_{I''\cap T}\in \tilde{SW}_{I'',z\circ z_1''}|_{T}\cap \tilde{SW}_{I'',z\circ z_2''}|_{T}$.
  \item Let $E(p)$ be the event that $p\in {SW}_{I,z}$.
\end{enumerate}
With these notations, the inequality~\eqref{eq:restrict_inverse_2} gives that
\begin{equation}\label{eq:restrict_inverse_3}
\cProb{\substack{I, z \\ I',I''\subseteq_{1/2} I\\ z_1', z_1'',z_2',z_2''\\T\subseteq_{\beta}[n]}}
{E(p)}
{E_1(p)\cap E_2(p)}
\geq\frac{\eps^{16}}{4^4e}.
\end{equation}

Define $\mathcal{D} = \frac{1-\rho}{1-/\rho/2}\nu' + \frac{\rho/2}{1-\rho/2}\nu$, and note that choosing $I,I'$ and $z, z'$ as above, the
distribution of $z\circ z'$ is exactly $\mathcal{D}$. Let
\[
W_{I',T} = \sett{p\in \mathcal{F}(\Sigma,I'\cap T)}{\Prob{z\sim \mathcal{D}^{\overline{I'}}}{p|_{I'\cap T}\in \tilde{SW}_{I',z}|_{T}}\geq \zeta}.
\]
\begin{claim}\label{claim:use_sse_inverse_restriction}
  For all $I'$ and $T$ we have that
  \[
  \Prob{\substack{\overline{I'} = \overline{I}\cup (I\setminus I')\\ z,z',z''}}
  {\exists p\in \tilde{W}_{I\cup I'\setminus I,z\circ z'}|_{T}\cap \tilde{W}_{I\cup I'\setminus I,z\circ z''}|_{T},
  p\not\in W_{I',T}}\leq \sqrt{\xi}.
  \]
\end{claim}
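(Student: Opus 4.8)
The plan is to fix $I'$ and $T$ once and for all, and to argue by a second-moment reduction followed by a direct-product argument. The first point is the independence structure hidden in the sampling: if one draws the partition $\overline{I'}=\overline{I}\cup(I\setminus I')$ together with $z\sim{\nu'}^{\overline{I}}$ first, and only afterwards draws $z',z''\sim\nu^{I\setminus I'}$, then $z'$ and $z''$ are conditionally independent and the laws of $z\circ z'$ and $z\circ z''$ are each $\mathcal{D}^{\overline{I'}}$. Consequently, for any product function $p$ on $\Sigma^{I'\cap T}$, writing $h_p$ for the conditional probability over $z'$ that $p\in\tilde W_{I',z\circ z'}|_T$, one gets $\Prob{}{p\in\tilde W_{I',z\circ z'}|_T\cap\tilde W_{I',z\circ z''}|_T}=\Expect{}{h_p^2}$, so that the probability in the claim is at most $\Expect{}{\card{(\tilde W_{I',z\circ z'}|_T\cap\tilde W_{I',z\circ z''}|_T)\setminus W_{I',T}}}$, and it suffices to bound this expected count by $\sqrt{\xi}$.

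The next step is to replace the $\eps/2$-lists, which can be very large, by the short lists, which have size $O(1/\eps^2)$ by Lemma~\ref{lem:net_of_prods}. If $p\in\tilde W_{I',w}|_T$ then $p=q|_T$ for some $q\in{\sf List}_{\eps/2}[f_{\overline{I'}\rightarrow w}]$; Lemma~\ref{lem:net_of_prods} gives $\hat q\in\tilde{SW}_{I',w}$ with $\card{\inner{q}{\hat q}}\geq\eps^2/100$, and Lemma~\ref{lem:corr_to_symbolic} upgrades this to $\Delta_{{\sf symbolic}}(q,\hat q)\leq r_0$ with $r_0=O_{\tau}(\log(1/\eps))$, so $p$ is symbolically $r_0$-close to the member $\hat q|_T$ of the bounded set $\tilde{SW}_{I',w}|_T$, while $\hat q$ itself has correlation at least $\eps/2$ with $f_{\overline{I'}\rightarrow w}$. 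Therefore any witness $p$ of a bad event, lying in both restricted $\eps/2$-lists but not in $W_{I',T}$, forces the existence of short-list elements $\hat q'\in\tilde{SW}_{I',z\circ z'}$ and $\hat q''\in\tilde{SW}_{I',z\circ z''}$, each of correlation at least $\eps/2$ with the corresponding restriction of $f$, whose $T$-restrictions agree off at most $2r_0$ coordinates. Since $T$ keeps each coordinate with probability $\beta=1/(C\log(1/\eps))$ and $2r_0$ is only logarithmic in $1/\eps$, this is exactly the accepting event of a direct product test of the shape ${\sf DP}(\rho,1/2,\beta)$ applied to the assignment sending each live set to the canonical representative of its short list.

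The heart of the argument is then to show that such near-agreeing configurations cannot be abundant unless they are explained by a genuine global function. Assuming for contradiction that the bad event for the fixed $I',T$ has probability exceeding $\sqrt{\xi}$, the reduction above makes the corresponding ${\sf DP}(\rho,1/2,\beta)$ test accept with probability at least a fixed power of $\sqrt{\xi}$, so Theorem~\ref{thm:DP_biased_version} yields a global $f^{\star}\colon[n]\to[R]$ such that its canonical short-list representative agrees with $f^{\star}$ off at most $r$ coordinates (with $r$ polynomial in $\log(1/\eps)$ and in $1/\rho$) on a $\xi^{O(\log(1/\rho)^2)}$-fraction of live sets; one then checks that the product function $P=\prod_i f^{\star}_i$ is the only popular candidate close to any test-passing pair, i.e.\ $P|_{I'\cap T}\in W_{I',T}$, and that every bad $p$ is symbolically $O(r_0)$-close to $P|_{I'\cap T}$. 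Combining this with the bounded size of the short lists, the conditional independence of $z',z''$, and the genuine gap between the thresholds $\eta\ll\zeta\ll\xi$ in the parameter hierarchy, the expected count $\Expect{}{\card{(\tilde W_{I',z\circ z'}|_T\cap\tilde W_{I',z\circ z''}|_T)\setminus W_{I',T}}}$ is pushed below $\sqrt{\xi}$, and Markov's inequality closes the claim. The main obstacle, and the reason a direct-product result is needed rather than a plain union bound, is that the set of product functions that can appear in a restricted $\eps/2$-list is uncontrolled in size (it grows with $n$); the global function $f^{\star}$ is what lets one dispense with that union bound by supplying a single object that explains all the near-agreeing short-list pairs simultaneously.
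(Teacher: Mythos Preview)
Your proposal has a genuine gap: the invocation of Theorem~\ref{thm:DP_biased_version} is not justified. That theorem concerns an assignment $F\colon\binom{[n]}{k}\to[R]^k$ on \emph{varying} subsets of $[n]$, tested for agreement on their random intersection. In the present claim $I'$ and $T$ are fixed throughout; the only randomness is over the restriction string $w=z\circ z'\in\Sigma^{\overline{I'}}$, and there is no family of overlapping coordinate-sets on which to run a direct product test. Your ``assignment sending each live set to the canonical representative of its short list'' has only one live set (namely $I'$), so the hypotheses of Theorem~\ref{thm:DP_biased_version} are never met, and the contradiction step producing a global $f^\star$ does not go through.

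The paper's argument is entirely different and needs no direct product machinery at this point. First, the statement is really about the short lists $\tilde{SW}$ rather than $\tilde{W}$ (this is what the proof establishes and what is used downstream via the events $E_1,E_2$). For each $p\in\mathcal{F}(\Sigma,I'\cap T)$ set $X_p=\{w\in\Sigma^{\overline{I'}}:p\in\tilde{SW}_{I',w}|_T\}$; then $p\notin W_{I',T}$ means exactly $\mathcal{D}(X_p)<\zeta$, and the bounded size of short lists gives the crucial estimate $\sum_p\mathcal{D}(X_p)\le O(1/\eps^2)$. The pair $(z\circ z',\,z\circ z'')$ is the edge distribution of a product Markov chain on $\Sigma^{\overline{I'}}$ with stationary measure $\mathcal{D}^{\overline{I'}}$ and second eigenvalue at most $1-\Omega_{\alpha,m}(\rho)$ (Lemma~\ref{lem:mossel_MC}). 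By a union bound, the probability in question is at most $\sum_{p\notin W_{I',T}}\inner{1_{X_p}}{\tilde{\mathrm{T}}\,1_{X_p}}$, and each summand is controlled by splitting into degrees: for degree at most $D=\rho^{-1}\log(1/\xi)$ the level-$d$ inequality (Fact~\ref{fact:level_d_inequality}) gives a contribution at most $2^{O_\alpha(D)}\mathcal{D}(X_p)^{3/2}\le 2^{O_\alpha(D)}\sqrt{\zeta}\,\mathcal{D}(X_p)$, and for higher degrees the spectral gap gives at most $(1-\Omega(\rho))^D\mathcal{D}(X_p)$. Summing over $p$ using $\sum_p\mathcal{D}(X_p)\le O(1/\eps^2)$ and the parameter hierarchy $\zeta\ll\xi$ yields $\sqrt{\xi}$. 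The union bound over $p$ succeeds directly---no global function is needed---precisely because the total mass $\sum_p\mathcal{D}(X_p)$ is controlled by the short-list size.
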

\begin{proof}
  For each $p\in\mathcal{F}(\Sigma,I'\cap T)$, define $X_p = \sett{z\in\Sigma^{\overline{I'}}}{p\in \tilde{SW}_{I',z}|_{T}}$. Note that
  the condition that $p\not\in W_{I',T}$ is equivalent to $\mathcal{D}(X_p)< \zeta$, and also that
  \begin{equation}\label{eq:restrict_inverse_4}
  \sum\limits_{p} \mathcal{D}(X_p)
  \leq
  \sum\limits_{p\in\mathcal{F}(\Sigma,I'\cap T)}\sum\limits_{z}\mathcal{D}(z) 1_{p\in \tilde{SW}_{I',z}|_{T}}
  \leq \sum\limits_{z}\mathcal{D}(z) \card{\tilde{SW}_{I',z}|_{T}}
  \leq O\left(\frac{1}{\eps^2}\right),
  \end{equation}
  where we used the fact that the short-lists as defined above have size at most $O(1/\eps^2)$.
  Now consider the distribution over $z\circ z'$ and $z\circ z''$ as in the claim, and note that it is a product distribution $\tilde{\mu}^{I'}$
  which has full support and the probability of each atom is at least $\Omega(\rho)$. Thus, denoting by $\tilde{\mathrm{T}}$ the averaging operator
  corresponding to $\tilde{\mu}$, we get by Lemma~\ref{lem:mossel_MC} that $\lambda_2(\tilde{\mathrm{T}})\leq 1-\Omega_{\alpha,m}(\rho)$. Thus, noting that
  $\mathcal{D}$ is a stationary distribution for $\tilde{\mathrm{T}}$ and using the Efron-Stein decomposition we get that the probability of the
  left hand side of the claim can be written as
  \[
  \sum\limits_{p\not\in W_{I',T}}{\inner{1_{X_p}}{\tilde{\mathrm{T}}^{\otimes I'} 1_{X_p}}}
  =
  \sum\limits_{p\not\in W_{I',T}} \sum\limits_{d=0}^{n}{\inner{1_{X_p}^{=d}}{\tilde{\mathrm{T}}^{\otimes I'} 1_{X_p}^{=d}}}
  \leq
  \sum\limits_{p\not\in W_{I',T}} \sum\limits_{d=0}^{n}\lambda_2(\tilde{\mathrm{T}})^d\norm{1_{X_p}^{=d}}_2^2.
  \]
  Setting $D = \frac{1}{\rho}\log(1/\xi)$, we get that for $d\leq D$ the contribution is at most
  \[
  \sum\limits_{p\not\in W_{I',T}} \sum\limits_{d=0}^{D}\norm{1_{X_p}^{=d}}_2^2
  =
  \sum\limits_{p\not\in W_{I',T}} \norm{1_{X_p}^{\leq D}}_2^2
  \leq
  \sum\limits_{p\not\in W_{I',T}} 2^{O_{\alpha}(D)}\mu(X_p)^{3/2}
  \leq 2^{O_{\alpha}(D)}\sqrt{\zeta}\sum\limits_{p\not\in W_{I',T}} \mu(X_p)
  \leq \frac{\sqrt{\xi}}{2},
  \]
  where we used Fact~\ref{fact:level_d_inequality}, Parseval and~\eqref{eq:restrict_inverse_4}. For $d>D$, the contribution is at most
  \[
  (1-\Omega_{\alpha,m}(\rho))^{D}\sum\limits_{p\not\in W_{I',T}} \sum\limits_{d=D}^{n}\norm{1_{X_p}^{=d}}_2^2
  \leq
  (1-\Omega_{\alpha,m}(\rho))^{D}\sum\limits_{p\not\in W_{I',T}} \norm{1_{X_p}}_2^2
  \leq
  (1-\Omega_{\alpha,m}(\rho))^{D}O\left(\frac{1}{\eps^2}\right)
  \leq \frac{\sqrt{\xi}}{2},
  \]
  where we used Parseval and~\eqref{eq:restrict_inverse_4} again. The proof is thus concluded by summing up the two inequalities.
\end{proof}

Note that the premise of the theorem says that with probability at least $\eps$ over the choice of $I$ and $z$ we have that
$W_{I,z}$ is non-empty, and hence by the definition of short lists we get that $\Prob{}{\exists p, E(p)}\geq \eps$.
In conjunction with~\eqref{eq:restrict_inverse_3} we conclude that
\[
\Prob{\substack{I\subseteq_{\rho}[n]\\I',I''\subseteq_{1/2}I\\ T\subseteq_{\beta} [n]}}{\exists p, E_1(p)\cap E_2(p)\cap E(p)}\geq \frac{\eps^{17}}{4^4e}.
\]
By Claim~\ref{claim:use_sse_inverse_restriction}, sampling $T\subseteq_{\beta}[n]$, the probability that $E_1(p)$ holds but
$p|_{I'\cap T}\not\in W_{I',T}$ is at most $\sqrt{\xi}$ and similarly for $E_2(p)$, so we get that
\begin{equation}\label{eq:restrict_inverse_5}
\Prob{\substack{I\subseteq_{\rho}[n], z\sim {\nu'}^{\overline{I}}\\I',I''\subseteq_{1/2}I\\ T\subseteq_{\beta} [n]}}
{\exists p, E(p), p|_{I'\cap T}\in W_{I',T}, p|_{I''\cap T}\in W_{I'',T}}\geq \frac{\eps^{17}}{4^4e} - 2\sqrt{\xi}\geq \eps^{18}.
\end{equation}

\subsubsection{Designing the Direct Product Function}
We are going to use~\eqref{eq:restrict_inverse_5} to define a strategy for the direct product test. Towards this end, we first define the
collection of $I'$'s on which we are going to assign a value which will give us a decent acceptance probability.
\begin{definition}
  We say $I'$ is good if there is $p\in \mathcal{F}(\Sigma,I')$ such that $\Prob{T\subseteq_{\beta} [n]}{p|_{I'\cap T}\in W_{I',T}}
    \geq \xi$.
\end{definition}
The following claim asserts that there is a significant fraction of good $I'$'s.
\begin{claim}\label{claim:restriction_inverse_good}
  $\Prob{I'\subseteq_{\rho/2}[n]}{I'\text{ is good}}\geq \frac{\eps^{18}}{2}$.
\end{claim}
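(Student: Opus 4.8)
The plan is to derive Claim~\ref{claim:restriction_inverse_good} from~\eqref{eq:restrict_inverse_5} by a short sequence of averaging arguments, the one substantive ingredient being the uniform (i.e.\ $n$-independent) bound on the size of the short lists $SW_{I,z}$ coming from Lemma~\ref{lem:net_of_prods}; this bound is exactly what converts the statement ``for every $T$ there is some witnessing $p$'' into ``there is a single $p$ witnessing for a noticeable fraction of $T$.'' Throughout we may assume $\eps$ is at most a small constant depending on $m,\alpha,\tau,\rho$, since the hypothesis of Theorem~\ref{thm:restriction_inverse} only weakens as $\eps$ decreases.

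First I would discard the requirement about $I''$ in the event of~\eqref{eq:restrict_inverse_5}: let $A'$ be the event, over $I\subseteq_\rho[n]$, $z\sim{\nu'}^{\overline{I}}$, $I'\subseteq_{1/2}I$ and $T\subseteq_\beta[n]$, that there exists a product function $p$ over $\Sigma^{I}$ with $E(p)$ holding and with $p|_{I'\cap T}\in W_{I',T}$. Since $A'$ is implied by the event in~\eqref{eq:restrict_inverse_5}, we still have $\Prob{}{A'}\geq\eps^{18}$; crucially $A'$ no longer refers to $I''$, and the marginal distribution of $I'$ is exactly $I'\subseteq_{\rho/2}[n]$. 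By an averaging argument applied to the conditional probability $\Prob{T}{A'\mid I,z,I'}$, with probability at least $\eps^{18}/2$ over $(I,z,I')$ we have $\Prob{T}{A'\mid I,z,I'}\geq\eps^{18}/2$; call such a triple \emph{heavy}.

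Now fix a heavy triple $(I,z,I')$. The event $E(p)$ forces $p\in SW_{I,z}={\sf ShortList}_{\eps,\eps^2/10}[f_{\overline{I}\rightarrow z}]$, and since $f$ is $1$-bounded, $\norm{f_{\overline{I}\rightarrow z}}_2\leq1$, so Lemma~\ref{lem:net_of_prods} gives $\card{SW_{I,z}}\leq\frac{10}{9\eps^2}$. Hence, by the pigeonhole principle over the at most $\frac{10}{9\eps^2}$ possible choices of $p$, there is a single $p^\star\in SW_{I,z}$ with
\[
\Prob{T\subseteq_\beta[n]}{p^\star|_{I'\cap T}\in W_{I',T}}\geq\frac{9\eps^2}{10}\cdot\frac{\eps^{18}}{2}\geq\eps^{21}.
\]
By the parameter hierarchy, $\xi=\eps^{1/c_1}\leq\eps^{21}$ (as $c_1\leq1/21$), and $p^\star|_{I'}\in\mathcal{F}(\Sigma,I')$ since $\mathcal{F}$ is closed under restrictions (up to a unimodular scalar, which does not affect membership in $W_{I',T}$). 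Thus $p^\star|_{I'}$ witnesses that $I'$ is good. As goodness of $I'$ depends on $I'$ alone, and every heavy triple has $I'$ good, we conclude
\[
\Prob{I'\subseteq_{\rho/2}[n]}{I'\text{ is good}}\geq\Prob{I,z,I'}{(I,z,I')\text{ heavy}}\geq\frac{\eps^{18}}{2}.
\]
The only step requiring care is the pigeonhole argument, which hinges on the $n$-independent bound on $\card{SW_{I,z}}$; everything else is routine averaging together with bookkeeping against the parameter hierarchy.
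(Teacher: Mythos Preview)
Your proof is correct and follows essentially the same approach as the paper: discard the extraneous $I''$-condition from~\eqref{eq:restrict_inverse_5}, average over $T$ to find a heavy triple $(I,z,I')$, then use the $O(1/\eps^2)$ bound on $|SW_{I,z}|$ from Lemma~\ref{lem:net_of_prods} together with pigeonhole to extract a single $p^\star$ witnessing goodness. The paper's version is slightly terser (it arrives at the bound $\Omega(\eps^{20})\geq\xi$ rather than $\eps^{21}\geq\xi$), and it does not spell out the unimodular-scalar issue for $p^\star|_{I'}\in\mathcal{F}(\Sigma,I')$ that you note, but the arguments are otherwise identical.
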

\begin{proof}
Note that by~\eqref{eq:restrict_inverse_5}, with probability at least $\eps^{18}/2$ over the choice of $I, I'$ and $z$ we get that
\[
\Prob{T\subseteq_{\beta} [n]}
{\exists p, E(p), p|_{I'\cap T}\in W_{I',T}}\geq \frac{\eps^{18}}{2},
\]
and as there are at most $O(1/\eps^2)$ product functions $p$ for which $E(p)$ holds we get that there is $p$ such that
\[
\Prob{T\subseteq_{\beta} [n]}
{p|_{I'\cap T}\in W_{I',T}}\geq \Omega(\eps^{20}) \geq \xi,
\]
hence $I'$ is good.
\end{proof}

Define the following randomized strategy $F$ for the direct product test. For each good $I'$:
\begin{itemize}
  \item Choose $z'\sim \mathcal{D}^{I'}$.
  \item Sample $p\in \tilde{SW}_{I',z'}$ uniformly among the $p$'s satisfying that $\Prob{T\subseteq_{\beta} [n]}{p|_{I'\cap T}\in W_{I',T}}
    \geq \xi$ and set $F[I']= p$.
\end{itemize}
For $I'$ which is not good, we choose $F[I']$ as a random string in $[R]^{I'}$. We now show two claims. The first of which asserts that in expectation,
the above randomized strategy passes the direct product test with significant probability.

\begin{claim}\label{claim:dp_passes_inverse}
  $\Expect{}{\Prob{}{F\text{ passes }{\sf DP}(\rho,1/2,\beta)}}\geq \Omega(\eps^{20} \zeta^2)\geq \zeta^3$.
\end{claim}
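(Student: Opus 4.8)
The plan is to show that the randomized direct product strategy $F$ constructed above passes the ${\sf DP}(\rho,1/2,\beta)$ test with probability at least $\zeta^3$ in expectation, by exhibiting a single common source of randomness under which the two queried assignments $F[A]$ and $F[B]$ agree on the sampled coordinates $C\cap T$ with good probability. The key observation is that in the test ${\sf DP}(\rho,1/2,\beta)$ we sample $C\subseteq_{\rho/2}[n]$ and then $A,B\supseteq C$ independently with $A,B\subseteq_{\rho}[n]$; equivalently, both $A$ and $B$ contain $C$ and the ``extra'' parts $A\setminus C$, $B\setminus C$ are independent. We will use $C$ in the role of $I'$ and the two independent extensions in the roles of the two restrictions appearing in inequalities such as~\eqref{eq:restrict_inverse_5}.

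First I would unpack the definition of $F$: for a good $I'$, $F[I']$ is sampled by choosing $z'\sim \mathcal{D}^{I'}$ and then picking $p\in \tilde{SW}_{I',z'}$ uniformly among those $p$ with $\Prob{T}{p|_{I'\cap T}\in W_{I',T}}\geq \xi$. By Claim~\ref{claim:restriction_inverse_good}, a good $I'$ occurs with probability at least $\eps^{18}/2$, so the test samples $C$ good with that probability. Conditioned on $C$ being good, I would argue that with probability $\Omega(\eps^{2})$ over the internal randomness, both $F[A]$ and $F[B]$ are (restrictions of) product functions drawn from short-lists $\tilde{SW}$ that, by the local-structure analysis, are forced to lie in $W_{C\cap T,T}$ on the coordinates $C\cap T$ after intersecting with a random $T\subseteq_{\beta}[n]$; since elements of $W_{C\cap T,T}$ have symbolic distance $O(\log(1/\eps))$ from a canonical representative (via Lemma~\ref{lem:corr_to_symbolic} and the short-list size bound), and since $\beta = \frac{1}{C\log(1/\eps)}$ is chosen precisely so that $T$ excludes all the $O(\log(1/\eps))$ disagreement coordinates with constant probability, the values $F[A]|_{C\cap T}$ and $F[B]|_{C\cap T}$ coincide. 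Collecting the probabilities — $\eps^{18}/2$ for $C$ good, $\Omega(\eps^2)$ for the product functions being in the right nets, and $\Omega(\zeta^2)$ for the probability that the freshly-drawn restrictions $z'$ for $A$ and for $B$ both land in the ``$\zeta$-dense'' good set $W_{C\cap T,T}$ — yields an overall bound of $\Omega(\eps^{20}\zeta^2)$, which is at least $\zeta^3$ by the hierarchy of parameters and the choice $\eta = \eps^{1/(c_3\rho)} \le \zeta = \eps^{1/(c_2\rho)}$.

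The main obstacle will be the bookkeeping that connects the two independent extensions of $C$ to a common product function: we do not have orthogonality of the net, so $F[A]$ and $F[B]$ are not literally equal even when both are ``correct'', and we must instead track that both agree with a representative from $W_{C\cap T,T}$ up to symbolic distance $O(\log(1/\eps))$ and then argue that a $\beta$-random $T$ avoids the (at most $2\cdot O(\log(1/\eps))$ total) disagreement coordinates with probability $(1-\beta)^{O(\log(1/\eps))} = \Omega(1)$. Getting the quantifiers right — that a \emph{fixed} pair of representatives works for a \emph{random} $T$, rather than needing a uniform bound over $T$ — requires care, and is exactly where the definition of $W_{I',T}$ as depending on $T$ (and the averaging inequality of Claim~\ref{claim:use_sse_inverse_restriction}) is used. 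Once this is in place the computation of the acceptance probability is a routine chain of the probabilistic facts (Fact~\ref{fact:holder_trick}, Fact~\ref{fact:restrict_keeps_correlations}) already assembled above, and the claim follows by combining~\eqref{eq:restrict_inverse_5}, Claim~\ref{claim:restriction_inverse_good}, and the fact that a good $I'$ admits a product function hitting $W_{I',T}$ with probability $\geq \xi$ over $T$.
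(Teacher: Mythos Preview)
Your proposal has the right ingredients but misidentifies the roles of the sets and, more importantly, misses the mechanism that links the two query answers.

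First, the identification ``We will use $C$ in the role of $I'$'' is backwards. In the test the sets $A,B$ are the queries to $F$; these correspond to $I',I''$ (each obtained as $I',I''\subseteq_{1/2}I$ for $I\subseteq_\rho[n]$). The set $C$ is the common subset, not a query set; ``good'' is a property of query sets, so ``the test samples $C$ good with probability $\eps^{18}/2$'' does not parse.

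Second, and more substantively, your argument for agreement on $C\cap T$ has a gap. You say both $F[A]|_{C\cap T}$ and $F[B]|_{C\cap T}$ land in some $W_{C\cap T,T}$, each within symbolic distance $O(\log(1/\eps))$ of ``a representative'', and then let a $\beta$-random $T$ absorb the disagreements. But nothing forces the two sides to agree with the \emph{same} representative: a set $W_{\cdot,T}$ may contain several far-apart elements, so common membership does not imply proximity. The paper avoids this entirely. It invokes~\eqref{eq:restrict_inverse_5} directly to get that with probability $\geq\eps^{18}/2$ over $(I,z,I',I'')$ there is a \emph{single} $p\in SW_{I,z}$ with
\[
\Prob{T}{\,p|_{I'\cap T}\in W_{I',T}\ \text{and}\ p|_{I''\cap T}\in W_{I'',T}\,}\geq\Omega(\eps^{20}).
\]
For such $T$, the definition of $W_{I',T}$ yields that with probability $\geq\zeta$ over the internal randomness of $F[I']$ the chosen assignment satisfies $F[I']|_T = p|_{I'\cap T}$, and likewise for $I''$; since both equal restrictions of the \emph{same} $p$ on the overlap, the test passes. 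The symbolic-distance plus random-$T$ cleanup you invoke was the tool used earlier (in going from~\eqref{eq:restrict_inverse_2} to~\eqref{eq:restrict_inverse_3}); at the present step the agreement is exact via the common $p$, and no further cleanup is needed.
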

\begin{proof}
  Note that the left hand side of the claim can be written as
  \[
  \Expect{}
  {\Expect{\substack{I\subseteq_{\rho}[n], T\subseteq_{\beta}[n]\\I',I''\subseteq_{1/2}I\\z'\sim \mathcal{D}^{I'}, z''\sim\mathcal{D}^{I''}}}
  {\Expect{\substack{p'\in  \tilde{SW}_{I',z'}\\ p'' \in \tilde{SW}_{I'',z'}}}{1_{I', I''\text{ are good} }1_{p'|_{T} = p''|_{T}}}}}.
  \]
  Looking at~\eqref{eq:restrict_inverse_5}, we get that with probability at least $\eps^{18}/2$ over the choice of $I,z$ and $I',I''$ it holds that
  \[
    \Prob{T\subseteq_{\beta} [n]}
    {\exists p, E(p), p|_{I'\cap T}\in W_{I',T}, p|_{I''\cap T}\in W_{I'',T}}\geq \frac{\eps^{18}}{2},
  \]
  so as in the proof of Claim~\ref{claim:restriction_inverse_good} there is $p$ such that
  \[
    \Prob{T\subseteq_{\beta} [n]}
    {p|_{I'\cap T}\in W_{I',T}, p|_{I''\cap T}\in W_{I'',T}}\geq \Omega(\eps^{20}),
  \]
  and in particular $I'$ and $I''$ are good. Also, it follows that $p$ is a candidate for both $F[I']$ and $F[I'']$ in the
  above randomized strategy, and by definition of $W_{I',T}$ we get that $p'$ chosen in the randomized strategy satisfies that $p'|_{T} = p|_{I'\cap T}$
  with probability at least $\zeta$; the same goes for $p''$. Hence under the randomness of the choice of $F$ the test passes with
  probability at least $\zeta^2$. The conclusion follows.
\end{proof}

The second claim asserts that the contribution of the assignment $F$ on non-good parts to the agreement of $F$ with any direct product function is small.
\begin{claim}\label{claim:dp_not_get_from_not_good}
  For all $r\leq n/2$,
  \[
    \Prob{F}{\exists g\colon [n]\to[R],~\Prob{I'}{I'\text{ not good and } \Delta(g|_{I'}, F[I'])\leq r}\geq \frac{1}{2}\eps'}\leq 2^{-n}.
  \]
\end{claim}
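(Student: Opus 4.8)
The plan is a first–moment estimate followed by a union bound over all $R^{n}$ functions $g\colon[n]\to[R]$, the delicate point being that the concentration obtained in the first step must be strong enough to absorb the $R^{n}$ cost of the union bound. First I would note that whether a set $I'$ is good is a deterministic property (it depends only on $f$ and the fixed parameters), so one may condition on the family of good sets; the only remaining randomness is in the values $F[I']$ for non-good $I'$, which are independent and uniform on $[R]^{I'}$. Fix a candidate $g$, and for the random $F$ set $S_g=\sett{I'}{I'\text{ not good},\ \Delta(g|_{I'},F[I'])\le r}$, so that the event in the claim for this $g$ is $\mu_{\rho/2}(S_g)\ge\frac12\eps'$. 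By independence, $\mu_{\rho/2}(S_g)=\sum_{I'\text{ not good}}\mu_{\rho/2}(I')Z_{I'}$ with $Z_{I'}=1_{\Delta(g|_{I'},F[I'])\le r}$ independent and $q_{I'}:=\Expect{F}{Z_{I'}}=R^{-\card{I'}}\sum_{j=0}^{r}\binom{\card{I'}}{j}(R-1)^{j}$.

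Next I would estimate $\mu_0:=\Expect{F}{\mu_{\rho/2}(S_g)}=\sum_{I'\text{ not good}}\mu_{\rho/2}(I')q_{I'}$ by splitting on the size of $I'$. The sets with $\card{I'}<\rho n/4$ contribute at most $\Prob{I'\sim\mu_{\rho/2}}{\card{I'}<\rho n/4}\le e^{-\Omega(\rho n)}$ by a Chernoff bound on $\card{I'}\sim\mathrm{Bin}(n,\rho/2)$; and since $r$ does not grow with $n$, for $\card{I'}\ge\rho n/4$ one has $q_{I'}\le\card{I'}^{\,r+1}R^{r}R^{-\card{I'}}\le R^{-\rho n/5}$ once $n$ is large compared with $r,R,\rho$. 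Hence $\mu_0\le 2^{-\beta n}$ for a constant $\beta=\beta(\rho)>0$, and in particular $\mu_0\le\frac14\eps'$ once $n\ge n_0(\eps,\rho,\tau,m,\alpha)$ (recall $\eps'$ depends only on $\eps,\rho,\tau,m,\alpha$, not on $n$).

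The main obstacle is the concentration step: I need $\Prob{F}{\mu_{\rho/2}(S_g)\ge\frac12\eps'}\le 2^{-n(1+\log_2 R)}$, which is exactly what survives the union bound over the $R^{n}$ functions $g$. A plain Markov bound, even with a high polynomial moment, does not suffice, since the ``clustered'' contributions to $\Expect{F}{\mu_{\rho/2}(S_g)^{k}}$ (tuples of indices with repetitions) are of order $\mu_0\cdot 2^{-\Omega(\rho n)}$ irrespective of $k$, so one cannot beat $2^{-\Omega(\rho n)}$ this way — and $\Omega(\rho n)$ is too weak against $R^{n}$. Instead I would use the exponential moment together with the fact that every weight is exponentially small, $\mu_{\rho/2}(I')\le(1-\rho/2)^{n}$: for $t\le(1-\rho/2)^{-n}$, using $e^{u}-1\le 2u$ on $[0,1]$,
\[
\Expect{F}{e^{t\mu_{\rho/2}(S_g)}}=\prod_{I'}\bigl(1+q_{I'}(e^{t\mu_{\rho/2}(I')}-1)\bigr)\le\exp\Bigl(2t\sum_{I'}\mu_{\rho/2}(I')q_{I'}\Bigr)=e^{2t\mu_0}.
\]
Taking $t$ of order $n/\eps'$, which for $n$ large is still far below $(1-\rho/2)^{-n}$ because $\eps'$ is independent of $n$, and applying Markov to $e^{t\mu_{\rho/2}(S_g)}$ yields $\Prob{F}{\mu_{\rho/2}(S_g)\ge\frac12\eps'}\le e^{-t(\frac12\eps'-2\mu_0)}\le e^{-t\eps'/4}\le 2^{-n(1+\log_2 R)}$. (Equivalently one may invoke a Bernstein inequality for weighted sums of independent Bernoullis, or observe that $\mu_{\rho/2}(S_g)\ge\frac12\eps'$ forces, after drawing $M=O(n/\eps')$ i.i.d.\ sets $I'\sim\mu_{\rho/2}$ that are pairwise distinct with probability $\ge\frac12$, at least $\tfrac14\eps'M$ of them to lie in $S_g$, an event of probability at most $\binom{M}{\eps'M/4}\bigl(R^{-\rho n/5}\bigr)^{\eps'M/4}$ for each fixed choice of the distinct sets.)

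Finally, a union bound over the $R^{n}$ functions $g\colon[n]\to[R]$ gives $\Prob{F}{\exists g,\ \mu_{\rho/2}(S_g)\ge\frac12\eps'}\le R^{n}\cdot 2^{-n(1+\log_2 R)}=2^{-n}$, which is the claim; the finitely many values of $n$ below $n_0$ are absorbed by shrinking $\eps'$ and by the standing assumption that $n$ is large relative to all other parameters. As indicated, the crux is making the tail of $\mu_{\rho/2}(S_g)$ genuinely of the form $2^{-n\log R}$ rather than merely $2^{-\Omega(\rho n)}$, which is why one must exploit the smallness of the individual weights $\mu_{\rho/2}(I')$ instead of arguing through polynomial moments.
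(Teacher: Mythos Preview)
Your proof is correct and takes the same approach as the paper: a Chernoff-type concentration bound for each fixed $g$, followed by a union bound over all $R^{n}$ functions. Your execution is in fact more careful than the paper's sketch—where the paper asserts $q_{I'}\le n^{r}R^{r-n}$ uniformly and applies Chernoff to the unweighted count $\sum_{I'\text{ not good}}Z_{I'}$, you correctly treat the $\mu_{\rho/2}$-weighted sum, split on $\card{I'}$ to control the mean $\mu_0$, and run the exponential-moment argument with $t$ up to $(1-\rho/2)^{-n}$, which is precisely what converts the weight bound $\mu_{\rho/2}(I')\le(1-\rho/2)^{n}$ into the $2^{-n(1+\log_2 R)}$ tail needed to survive the union bound.
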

\begin{proof}
  Fix $g\colon [n]\to[R]$. Note that for each $I'$ which is not good, the probability that $\Delta(g|_{I'}, F[I'])\leq r$
  is at most $n^r R^{r-n}$. Let $Z_{I'}$ be the indicator random variable of this event, so that $Z_{I'}$ are independent and
  each of them has expectation at most $n^r m^{r-n}$. Thus, we get by Chernoff's inequality that
  \[
        \Prob{}{\sum\limits_{I'\text{ not good}} Z_{I'}\geq \frac{1}{2}\eps' N}\leq 2^{-\Omega(\eps' N)}
  \]
  where $N$ is the number of $I'$ which are not good. Taking a union bound over $g$, we get that the probability in question is at most
  $R^n 2^{-\Omega(\eps' N)}\leq 2^{-\Omega(\eps' N)}\leq 2^{-n}$ as required.
\end{proof}

\subsubsection{Applying the Direct Product Theorem}
We are now in a position to invoke the direct product theorem to finish the proof.

By Claim~\ref{claim:dp_passes_inverse} and an averaging argument, with probability at least $\zeta^3/2$ over the choice of $F$ we
have that $F$ passes the direct product test with probability at least $\zeta^3/2$. Hence, by Claim~\ref{claim:dp_not_get_from_not_good}
and the union bound we have that with probability at least $\zeta^3/2 - 2^{-n}\geq \zeta^3/4$ it holds that $F$ passes the direct
product test with probability at least $\zeta^3/2$ and the contribution of the not-good $I'$'s to the agreement of $F$ with any
direct product function is at most $\frac{1}{2}\eps'$. We fix such choice of $F$.

Using Theorem~\ref{thm:DP_biased_version} we conclude that there is $g\colon [n]\to [R]$ such that
\[
    \Prob{I'}{\Delta(g|_{I'}, F[I'])\leq r}\geq \eps'
\]
for $r = {\sf poly}\left(\frac{\log(1/\zeta)}{\rho}\right)$ and $\eps'$ as in~\eqref{eq:rest_inverse_eps'}. Thus, we get that $\Prob{I'}{I'\text{ is good}, \Delta(g|_{I'}, F[I'])\leq r}\geq \eps'/2$.

Let $p_g\in\mathcal{F}(\Sigma,[n])$ be the product function corresponding to $g$. We note that for any good $I'$ such that
$\Delta(g|_{I'}, F[I'])\leq r$, by the definition of the randomized strategy we have that choosing $T\subseteq_{\beta}[n]$
and $T'\subseteq_{1/r} T$ it follows that $F[I']\in W_{I',T}$ with probability at least $\xi$,
and that conditioned on $T$ we have that $g|_{I'\cap T'} = F[I']|_{T'}$ with probability at least $\Omega(1)$.
Hence both events hold with probability at least $\Omega(\xi)$, and we fix such choices of $T$ and $T'$ henceforth.

By definition of $W_{I', T}$ we get that as $F[I']\in W_{I', T}$, choosing $z\sim \mathcal{D}^{\overline{I'}}$
we have $F[I']|_{T} \equiv {p_z}|_{T}$ for some $p_z\in \tilde{SW}_{I', z}$ with probability at least $\zeta$. For such $z$ it holds that
$\card{\inner{{p_z}}{f_{\overline{I'}\rightarrow z}}}\geq \eps/2$, and so by Fact~\ref{fact:restrict_keeps_correlations} we get that choosing $z'\sim \nu^{I'\setminus T}$
it holds that
\[
\card{\inner{{p_z}|_{I'\cap T}}{f_{\substack{\overline{I'}\rightarrow z\\ I'\setminus T\rightarrow z'}}}}\geq \frac{\eps}{4}
\]
with probability at least $\eps^2/16$. In this case we get
\[
\card{\inner{F[I']|_{T}}{f_{\substack{\overline{I'}\rightarrow z\\ I'\setminus T\rightarrow z'}}}}\geq \frac{\eps}{4}.
\]
Choosing $z''\sim \mu^{T\setminus T'}$, we get by Fact~\ref{fact:restrict_keeps_correlations} that
\[
\card{\inner{{F[I']|_{T}}_{T\setminus T'\rightarrow z''}}{f_{\substack{\overline{I'}\rightarrow z\\ I'\setminus T\rightarrow z'\\T\setminus T'\rightarrow z''}}}}\geq \frac{\eps}{8}
\]
with probability at least $\eps^2/64$, in which case it follows that
\[
\card{\inner{{{p_g}|_{T'}}}{f_{\substack{\overline{I'}\rightarrow z\\ I'\setminus T\rightarrow z'\\T\setminus T'\rightarrow z''}}}}\geq \frac{\eps}{8}.
\]
In conclusion, we get that
\[
\Expect{\substack{I', T, T'\\ z,z',z''}}
{\card{\inner{{{p_g}|_{T'}}}{f_{\substack{\overline{I'}\rightarrow z\\ I'\setminus T\rightarrow z'\\T\setminus T'\rightarrow z''}}}}^2}\ggg \eps'\xi\zeta\eps^2.
\]
Thus, looking at $G = p_g f$ we get that
\[
\Expect{\substack{I', T, T'\\ z,z',z''}}{
{\sf Stab}_{1/2}\left(G_{\substack{\overline{I'}\rightarrow z\\ I'\setminus T\rightarrow z'\\T\setminus T'\rightarrow z''}}\right)
}
\geq
\Expect{\substack{I', T, T'\\ z,z',z''}}
{\card{\E[G_{\substack{\overline{I'}\rightarrow z\\ I'\setminus T\rightarrow z'\\T\setminus T'\rightarrow z''}}]}^2}\ggg \eps'\xi\zeta\eps^2.
\]
We now consider the distribution over the restriction, and note that it is equivalent to a restriction that chooses $J\subseteq_{\frac{\rho}{2}\cdot\beta\cdot\frac{1}{r}} [n]$
and fixes $\overline{J}$ according to $(\mathcal{D}')^{\overline{J}}$, where $\mathcal{D}'$ is some mixture of the distributions $\mathcal{D}$ and $\nu$. Thus, re-writing
the above inequality we get that
\[
\Expect{\substack{J\subseteq_{\frac{\rho}{2}\cdot\beta\cdot\frac{1}{r}} [n] \\ y\sim (\mathcal{D}')^{\overline{J}}}}
{{\sf Stab}_{1/2}\left(G_{\overline{J}\rightarrow y}\right)}
\ggg \eps'\xi\zeta\eps^2.
\]

Using Lemma~\ref{lem:op_comparison_lemma} we get that the left hand side is at most
${\sf Stab}_{1-A\cdot \rho\beta/r}(G)$ where $A$ depends only on $\alpha$ and $m=\card{\Sigma}$, so
${\sf Stab}_{1-A\cdot \rho\beta/r}(G)\ggg \eps'\xi\zeta\eps^2$. It follows from Fact~\ref{fact:stability_to_weight}
that $W_{\leq A'\cdot \frac{r\log(1/\eps'\xi\zeta\eps)}{\rho\beta}}[G]\ggg \eps'\xi\zeta\eps^2$, where $A'$ depends only on $\alpha$ and $m$.
Thus, $L = (p_g f)^{\leq A'\cdot \frac{r\log(1/\eps'\xi\zeta\eps)}{\rho\beta}}$ is a function of degree at most $A'\cdot \frac{r\log(1/\eps'\xi\zeta\eps)}{\rho\beta}$
and $2$-norm $1$ such that
\[
\inner{f}{\overline{p_g}L}=\inner{G}{L} = W_{\leq A'\cdot \frac{r\log(1/\eps'\xi\zeta\eps)}{\rho\beta}}[G]\ggg \eps'\xi\zeta\eps^2,
\]
and we are done.\qed

\section{Proof of the Direct Product Theorem}\label{sec:dp}

\newcommand{\poly}{\mathsf{poly}}
\newcommand{\maintest}{{\em \textrm{Agreement-Test}}\xspace}
\newcommand{\test}{{\em \textrm{Modified-Test}}\xspace}
\newcommand{\subsetmaintest}{{\em \textrm{Subset-Agreement-Test}}\xspace}
	
In this section, we prove Theorem~\ref{thm:DP_biased_version}. This result improves can be seen as a quantitative improvement over the direct product theorem from~\cite{BKMcsp3}.  To prove this result, we will first present and analyze a uniform version of this direct product tester; by that, we mean that we are going to be given an assignment to sets of
size precisely $k = \rho n$, and that the various intersection sizes in the set are all replaced with exact intersection sizes.
In Sections~\ref{sec:product_1} and ~\ref{sec:product_2}, we will show how to derive the direct product theorem when the underlying distribution is a product distribution as required for Theorem~\ref{thm:DP_biased_version}, using the well known trick of ``going to infinity and back''.
	
\paragraph{Notation.}Fix an alphabet $[m]=\{0,1, 2, \ldots, m-1\}$. Given a string $x\in [m]^n$ and a subset $S\subseteq [n]$, we use the notation $x|_{S}$ to denote the part of the string $x$ restricted to the set $S$.  Given $x, y\in [m]^n$, we use the notation $x \overset{\geq t}{\neq} y$ to denote that the set $\{ i\in [n] \mid x_i \neq y_i\}$ is of size at least $t$. Similarly, we use $x \overset{\leq t}{\neq} y$ to denote that the strings $x$ and $y$ differ in at most $t$ locations.

	\paragraph{The set up for the direct product testing.} Fix $q, q'\in (0,1)$ such that $q'<q$ and an integer $t\in\mathbb{N}$. Let $q''= q-q'$. Suppose we are given a table $F: {[n]\choose qn} \rightarrow [m]^{qn}$ where $F[S] \in [m]^{qn}$ can be thought of as assigning a symbol from $[m]$ to every element in $S$ (by associating some fixed ordering on the elements of $[n]$). Consider the  agreement test (\maintest) parameterized by $(q, q', t)$ given in Figure~\ref{fig:maintest}. Let $\mathcal{D}_{q,q'}$ be the distribution associated with the pair $(A_0\cup B_0, A_0\cup B_1)$ in the test.
	\begin{figure}[!h]
		\label{fig:maintest}
		\fbox{
			
			\parbox{450pt}{
				\vspace{10pt}
				Given $F: {[n]\choose qn} \rightarrow [m]^{qn}$,
				\begin{itemize}
					\item Pick a random set $A_0\cup B_0$ of size $qn$ where $|A_0| = q'n$
					\item Select a random set $B_1\subseteq [n]\setminus A_0$ of size $q''n$
					\item Check if $F[A_0\cup B_0]|_{A_0} \overset{\leq t}{\neq}  F[A_0\cup B_1]|_{A_0}$
				\end{itemize}
			}
		}
		\caption{\maintest with parameters $(q, q', t)$.}
	\end{figure}
	
	It is clear that if the table $F$ comes from a global string $a\in [m]^n$, in the sense that there is a vector $a$ such that $F[S] = a|_{S}$ for all $S$, then
    the test accepts with probability $1$ (even when $t =0$). The following result is an inverse type result to this statement in the small soundness regime,
    and as discussed in the introduction in this case there are several challenging examples. In the following theorem, we prove that the type of assignments discussed
    in the introduction are essentially the only assignments that pass the test with non-negligible probability.
	\begin{thm}
		\label{thm:DPtest}
        There exists $c>0$ such that the following holds for sufficiently large $n\in\mathbb{N}$ and $\eps\geq 2^{-n^{c}}$.

		Fix an alphabet $[m]$ and  $c_0 > 0$ be any constant. For all $\eps>0$, $0<q'< q<1$ such that $q'\leq \frac{9q}{10}$ and
        $T = \left(\frac{\log(1/\eps)}{q'}\right)^{c_0}$,  suppose that $F: {[n]\choose qn}\rightarrow [m]^{qn}$ satisfies
		\[
		\Prob{(S_1,S_2)\sim \mathcal{D}_{q,q'}}{F[S_1]|_{S_1\cap S_2} \overset{\leq T}{\neq} F[S_2]|_{S_1\cap S_2}}\geq \eps.
		\]
		Then there exists a function $g: [n]\rightarrow[m]$ such that for at least an $\eps^{O(\log (1/q')^2)}$ fraction of $S\in {[n] \choose qn}$, we have
        $|\{ i\in S\mid F[S]_i\neq  g(i)\}| \leq  \left(\frac{T\log{(1/\eps)}}{q'}\right)^{O(1)}$.\end{thm}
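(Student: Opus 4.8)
The goal is Theorem~\ref{thm:DPtest}, a small-soundness direct product theorem for the \maintest with parameters $(q,q',t)$ where the intersection sizes are exact (the uniform/slice version), with Hamming-distance soundness that is constant (depending on $T, \eps, q'$) rather than a constant fraction of $qn$. The overall strategy follows the approach of~\cite{BKMcsp3} but pushes the quantitative bounds: the main new ingredient compared to that work is that the agreement radius $T$ is allowed to grow polylogarithmically with $1/\eps$, and correspondingly one must track how the agreement ``propagates'' through a sequence of intersecting sets while only losing a bounded Hamming distance at each stage.

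\textbf{Step 1: Local consistency and a "most popular value" graph.} First I would set up the standard machinery: define, for each $S\in\binom{[n]}{qn}$, the assignment $F[S]$, and for each element $i\in[n]$ look at the distribution of $F[S]_i$ over random $S\ni i$. The first step is to pass from the two-set test to a statement about many mutually consistent sets. Using the distribution $\mathcal{D}_{q,q'}$, a Cauchy--Schwarz / second-moment argument (in the spirit of Fact~\ref{fact:holder_trick}) shows that with probability $\geq \poly(\eps)$ over a random ``core'' set $A_0$ of size $q'n$ and a random assignment value $w = F[A_0\cup B_0]|_{A_0}$, a $\poly(\eps)$-fraction of extensions $B$ of $A_0$ satisfy $F[A_0\cup B]|_{A_0}\overset{\leq T}{\neq} w$. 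The key point is that this ``popular core value'' $w$ is a string on $A_0$ that is within Hamming distance $T$ of $F[S]|_{A_0}$ for many $S\supseteq A_0$, and these $F[S]$ in turn agree pairwise up to distance $2T$ on their shared core. This is where the slice structure is convenient: two random sets of size $qn$ sharing a core of size $q'n$ intersect in roughly $q'n + (q-q')^2 n$ coordinates, so agreement on the core propagates to agreement on a genuinely large common part.

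\textbf{Step 2: Building the global function $g$ and bootstrapping the radius.} Given the popular core $A_0$ and value $w$, define $g$ tentatively on a large subset of $[n]$ by majority-vote of $F[S]_i$ over sets $S$ that are ``good'' with respect to $(A_0,w)$. One then argues, via a chaining/zoom-out argument over a sequence of nested cores of geometrically decreasing size $q'n, (q')^2 n /\poly, \ldots$ (there are $O(\log(1/q')^2)$ such stages, which is exactly where the exponent $O(\log(1/q')^2)$ in the conclusion comes from), that a $\poly(\eps)$-fraction of \emph{all} sets $S$ satisfy $F[S]|_{S}\overset{\leq T'}{\neq} g|_S$, where at each stage the Hamming distance grows by an additive $O(T/q')$-type term, so the final radius is $T' = \left(\frac{T\log(1/\eps)}{q'}\right)^{O(1)}$. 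Each stage uses: (a) a sampling/concentration argument that a random $S$ shares most of its mass with a random set from the ``good'' family (so their $F$-values agree up to the accumulated radius on that overlap); and (b) a counting argument that $g$ as defined is consistent with $F[S]$ outside a bounded number of coordinates. The $\eps^{-n^c}$ lower bound on $\eps$ is needed so that these $\poly(\eps)$ and $2^{-\Omega(\poly(\eps)\cdot n)}$-type error terms from Chernoff bounds over the $\binom{n}{qn}$ sets stay negligible.

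\textbf{The main obstacle, and the reduction to this theorem.} The hard part is Step 2: making the radius genuinely \emph{bootstrap} correctly — i.e.\ ensuring that when one zooms out from a small popular core to a large fraction of all sets, the Hamming error only grows by a controlled additive amount at each of the $O(\log^2(1/q'))$ stages, rather than multiplying. This requires a careful choice of the intermediate core sizes and a sampling lemma saying that for a random pair (good set, arbitrary set) the symmetric difference restricted to the relevant window is small with good probability, combined with a union bound over the (bounded) number of coordinates where $g$ and $F[S]$ can disagree. Once Theorem~\ref{thm:DPtest} is established for the uniform slice model, Theorem~\ref{thm:DP_biased_version} follows by the standard ``go to infinity and back'' device: a $\rho$-biased assignment $F\colon(\{0,1\}^n,\mu_\rho^{\otimes n})\to[R]^{\leq n}$ passing ${\sf DP}(\rho,\alpha,\beta)$ is lifted to an assignment on a much larger ground set $N\gg n$ where sets have exact size $\rho N$ and the ${\sf DP}(\rho,\alpha,\beta)$ distribution is simulated by the slice distribution $\mathcal{D}_{q,q'}$ with $q=\rho$, $q'=\rho\alpha$, and the subsampling by $T\subseteq_\beta$ absorbed into the agreement radius (setting $t = \Theta(\beta\cdot\rho\alpha N)$ in expectation and using that $\beta\geq 1/\log^C(1/\eps)$ to keep $T$ polylogarithmic); then Theorem~\ref{thm:DPtest} produces $g$, which pulls back to $f\colon[n]\to[R]$ with the claimed parameters. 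I would carry out the infinite-model reduction first (Sections~\ref{sec:product_1},~\ref{sec:product_2} as the excerpt indicates), then prove Theorem~\ref{thm:DPtest} directly in the slice model.
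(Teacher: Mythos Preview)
Your Step 1 (local structure via a popular core value and majority vote) is in the right spirit and matches the paper's opening move: define $g_{A_0,B_0}$ by plurality over the consistency set $\Cons_T(A_0,B_0)$ and show it agrees with $F$ on most of that set up to a controlled Hamming radius. However, your Step 2 is where the proposal diverges from the paper and, as written, has a genuine gap.

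You propose to globalize by a ``chaining/zoom-out'' through $O(\log^2(1/q'))$ nested cores of geometrically decreasing size, and you attribute the exponent $\eps^{O(\log(1/q')^2)}$ to the number of such stages. This is not how the paper proceeds, and your outline does not explain the crucial step: why the local functions $g_{A_0,B_0}$ for \emph{different, independent} choices of $A_0$ should agree with each other. Shrinking the core does not by itself break the dependence on $A_0$; a set $S$ drawn uniformly from $\binom{[n]}{qn}$ need not contain any fixed small core, so you still have to argue that the many local functions $g_{A_0,B_0}$ are all close to one another. Your sketch asserts this via a ``sampling/concentration'' argument but does not identify the mechanism.

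In the paper, this globalization is carried out in two steps that you are missing. First, a \emph{modified test} and a Cauchy--Schwarz squaring show that for a typical $(A_0,B_0)$ and a \emph{correlated} $(\tilde A_0,B)$ with $|A_0\cap\tilde A_0|=c'q'n$, the functions $g_{A_0,B_0}$ and $g_{\tilde A_0,B}$ agree up to the desired radius. Second, and this is the key ingredient, the correlation between $A_0$ and $\tilde A_0$ is broken by a \emph{small-set expansion} argument on the multi-slice graph $\mathcal{G}_n$ whose edges are exactly these correlated pairs. The paper proves (Lemma~\ref{lemma:sse_prem}) that $\phi_{\mathcal{G}_n}(\mu)\geq 1-\mu^{\Omega(1/\log(1/cq')^2)}$ using hypercontractivity on the multi-slice (from \cite{FOW}) and an eigenvalue bound (from \cite{BravermanKLM22}); it is precisely this SSE exponent, not a count of chaining stages, that produces the $\eps^{O(\log(1/q')^2)}$ in the conclusion. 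Without an SSE-type argument (or an equivalent mechanism for showing that the ``agreement classes'' of local functions cannot all be small), your plan does not close the gap between local and global structure.

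Your remarks on the reduction to the biased test (the ``go to infinity and back'' trick) are correct and match the paper's Sections~\ref{sec:product_1}--\ref{sec:product_2}.
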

    We start by giving a high level overview of the proof of Theorem~\ref{thm:DPtest}.  The overall argument is similar to the one from~\cite{BKMcsp3}, and in order to
    improve upon the quantitative bounds therein we require a more careful analysis, as well as more explicit quantitative bounds small set expansion result on
     the multi-cube graph (which plays an important role in our proof)
	
	\begin{enumerate}
		\item {\bf Getting a local structure:} Suppose the table $F$ passes the \maintest with probability at least $\eps$. For a pair $(A,B)$, we define a function $g_{A,B} : [n]\rightarrow [m]$ by taking the plurality vote among $F[A\cup B']|_{i}$ where $(A, B')$ is such that $F[A\cup B]|_{A}$ agrees with $F[A\cup B']|_{A}$ on all but $t$ coordinates. We then show that for a typical $(A_0,B_0)$, the function $g_{A_0,B_0}$ agrees with the table $F[A_0\cup B']$, on all but $\poly(t)$ coordinates, for at least $\eps^{O(1)}$ fraction of the $B'$s. Since in this step we can only show that $g_{A_0,B_0}$ only agrees sets of the form $A_0\cup B'$, which only constitutes $\exp(-n)$ fraction of the sets from ${[n]\choose qn}$, we somehow need to make sure that these different $g_{A_0, B_0}$s indeed agree with each other.
		
\item {\bf Establishing consistency between various local views:} In this step, we show that for a typical pairs $(A_0, B_0)$ and $(A'_0, B'_0)$, the functions $g_{A_0, B_0}$ and $g_{A'_0, B'_0}$ agree with each other on all but $t^{O(1)}$ many coordinates. We show this in two steps. In the first step, we study a slightly different test that we call the \test. The purpose of  analyzing this test is to conclude that for a typical pair $(A_0, B_0)$, if we select a pair $(\tilde{A}_0, B'_0)$, where $\tilde{A}_0$ is a correlated copy of $A_0$ and $B'_0$ in independent of $(A_0, B_0)$, then with noticeable probability the functions $g_{A_0, B_0}$ and $g_{\tilde{A}_0, B'_0}$ agree on all but $t^{O(1)}$ many coordinates. In the second step, we show how to break the correlation between $\tilde{A}_0$ and $A_0$. Here, we appeal to the small set expansion property of a graph defined over a multi-slice in $[3]^n$. Towards this end,  we think of $(A,B)$ as an element $x \in [3]^n$ in the following way: $x_i=1$ if $i\in A$, $x_i=2$ if $i\in B$ and $x_i=0$ otherwise. The edges of the graph are given by the distribution on the pairs $(A_0, B_0)$ and $(\tilde{A}_0, B'_0)$. We show that this graph is a small-set expander using techniques from~\cite{BravermanKLM22}. As every small set in this graph expands, we use this to show that for typical pairs $(A_0, B_0)$ and $(A'_0, B'_0)$, the functions $g_{A_0, B_0}$ and $g_{A'_0, B'_0}$ agree with each other on all but $\poly(\eta)$ many coordinates. Finally, this means that there exists a pair $(A^\star, B^\star)$ such that $g_{A^\star, B^\star}$ satisfies the conclusion of Theorem~\ref{thm:DPtest}.
		
\item {\bf From uniform setting to a product distribution:} We take $N$ which is significantly larger than $n$, and attempt to simulate the $q$-biased distribution
over $\{0,1\}^n$ by the uniform distribution over $qN$ sized subsets of $[N]$. More precisely, given a function $G: (P[n], \mu^{\otimes n}) \rightarrow [m]^{\leq n}$, we define a map $\tilde{G} : {[N] \choose qN} \rightarrow [m]^{qN}$ as follows: for $S\in {[N] \choose qN}$, define $\tilde{G}(S)|_{S\cap [n]} = G(S\cap [n])$ and $\tilde{G}(S)|_{S\setminus [n]} =  0^{|S\setminus [n]|}$. We note that taking a random $qN$ sized set $S$ from $[N]$, the distribution of $A = S\cap [n]$ is very close to being of a random subset of $[n]$ in which each coordinate is included with probability $q$. Thus, we are able to relate the performance of the (uniform sized set) direct product tester of $\tilde{G}$,
and the performance of the ($q$-biased) direct product tester of $G$. Applying the uniform sized direct product testing result on $\tilde{G}$, we are quick able then
to conclude Theorem~\ref{thm:DP_biased_version}.
	\end{enumerate}

	\subsection{Preliminaries}
	
\subsubsection{A Sampling Lemma}
	Consider the bipartite inclusion graph $G(n, \ell) = G([n]\cup Y, E)$ between $[n]$ and $Y = {[n]\choose \ell}$ for some $1\leq \ell< n$, in which the edge set consists of pairs
$(i,A)$ such that $i\in A$. The following sampling lemma from ~\cite{ImpagliazzoKW12} will be useful for our analysis.
	
	\begin{lemma}
		\label{lemma:sampler}
		Let $G(n,\ell)$ be the inclusion graph for $1\leq \ell< n$. Let $Y' \subseteq Y$ be any subset of measure $\rho <1/2$. For any constant $0 < \nu< 1$, we have that for all but at most $O_\nu\left(\frac{\log 1/\rho}{\ell}\right)$ fraction of vertices $x\in [n]$,
		$$\left| \Pr_{y\in N(x)}[y\in Y'] - \rho\right|\leq \nu\rho.$$
		Here, $N(x)$ is the neighbors of the vertex $x$ in $G$.
	\end{lemma}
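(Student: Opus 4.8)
The plan is to prove the sampling lemma (Lemma~\ref{lemma:sampler}) by a standard second-moment argument on the inclusion graph $G(n,\ell)$. Fix a set $Y'\subseteq Y = \binom{[n]}{\ell}$ of measure $\rho$. For a vertex $x\in[n]$, let $p(x) = \Prob{y\in N(x)}{y\in Y'}$ be the local density seen by $x$; note that the neighbourhood distribution of $x$ is exactly the uniform distribution over $\ell$-subsets of $[n]$ containing $x$. First I would compute $\Expect{x\in[n]}{p(x)}$. By swapping the order of sampling, picking $x$ uniformly and then $y\in N(x)$ uniformly is the same as picking $y\in Y$ with probability proportional to $|y| = \ell$ (which is uniform since all sets have the same size) and then picking $x\in y$ uniformly; hence $\Expect{x}{p(x)} = \rho$ exactly. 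The content of the lemma is therefore a concentration statement: $p(x)$ is close to its mean $\rho$ for all but an $O_\nu(\log(1/\rho)/\ell)$ fraction of $x$.

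The key step is to bound the variance, or more precisely a suitable higher moment, of $p(x)$. I would write $p(x) = \Expect{y\ni x}{\mathbf{1}_{y\in Y'}}$ and expand $\Expect{x}{p(x)^2} = \Expect{x}{\Expect{y,y'\ni x}{\mathbf{1}_{y\in Y'}\mathbf{1}_{y'\in Y'}}}$, where $y,y'$ are independent uniform $\ell$-subsets containing $x$. Swapping orders, this equals $\Prob{y,y'}{y,y'\in Y'}$ where $(y,y')$ is the distribution: pick $x$ uniform, then $y,y'\ni x$ independently. One checks this pair distribution is a mild perturbation of two independent uniform $\ell$-sets — the two sets are ``correlated through one shared element $x$'' — and a direct computation (or an appeal to the fact that the inclusion graph $G(n,\ell)$ is a good sampler, which is exactly the hypercontractivity / eigenvalue bound for the Johnson-type graph with $\lambda_2 = 1-\Theta(1/\ell)$ when $\ell < n/2$) gives $\Expect{x}{p(x)^2} \le \rho^2 + O(\rho/\ell)$, so $\mathrm{Var}_x(p(x)) = O(\rho/\ell)$. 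To get the $\log(1/\rho)$ rather than a polynomial dependence on $1/\rho$ in the exceptional fraction, I would instead bound a $k$-th moment for $k \asymp \log(1/\rho)$: iterating the above, $\Expect{x}{p(x)^k} \le \rho^k(1 + O(k/\ell))^{\binom{k}{2}}$ roughly, more carefully $\Expect{x}{(p(x)-\rho)^k}\le (Ck/\ell)^{k/2}\rho^{k/2}$ type bound, via the spectral gap of the inclusion graph applied to the indicator $\mathbf{1}_{Y'}$ and its low-degree truncations.

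Then I would conclude by Markov's inequality: the fraction of $x$ with $|p(x)-\rho| > \nu\rho$ is at most $\Expect{x}{|p(x)-\rho|^k}/(\nu\rho)^k$. Plugging the moment bound and optimizing $k\asymp \log(1/\rho)$ yields that this fraction is at most $O_\nu(\log(1/\rho)/\ell)$, as claimed. (An alternative, cleaner route that avoids optimizing $k$ is to split $Y'$ according to whether the local density is high or low and apply a Chernoff-type bound for the sampler graph $G(n,\ell)$, using that $G(n,\ell)$ is an $(\varepsilon,\delta)$-sampler with $\delta$ exponentially small in $\ell\varepsilon^2$; then the exceptional set has measure $\exp(-\Omega(\ell\nu^2)) \cdot (\text{something})$ — but to land exactly on the $\log(1/\rho)/\ell$ bound one does need to track the $\rho$-dependence carefully, which is why the moment computation is the more robust path.)

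The main obstacle I anticipate is getting the dependence on $\rho$ right — namely the $\log(1/\rho)$ factor — rather than the qualitative concentration, which is routine. This requires either the $k$-th moment computation with $k\asymp\log(1/\rho)$ and careful bookkeeping of the combinatorial overlap terms $\Prob{}{|y\cap y'| = j}$ for the pair distribution, or equivalently a sharp form of the sampler/expander-mixing bound for the inclusion graph that is sensitive to the measure of $Y'$. Since this lemma is quoted from~\cite{ImpagliazzoKW12}, for the write-up I would in fact cite it directly rather than reproduce the computation, and only sketch the second-moment argument above for completeness.
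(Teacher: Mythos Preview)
Your proposal is correct and matches the paper's treatment: the paper does not prove this lemma at all but simply cites it from~\cite{ImpagliazzoKW12}, exactly as you recommend in your final paragraph. Your moment-based sketch is a reasonable outline of the underlying argument (and your caveat about tracking the $\log(1/\rho)$ dependence carefully is well-placed), but none of it is needed for the write-up.
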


\subsubsection{A Small Set Expansion Result}
	For a graph $G(V,E)$, let $\mathcal{T}(G)$ be the Markov operator associated with $G$. Also, let $$\phi_{G}(\mu) := \min_{\substack{S\subseteq V({G}),\\ |S|\leq \mu |V({G})|}} \Pr_{(u,v)\in E({G})}{[v\notin S \mid u\in S]}.$$ Note that if every subset of size at most $\mu$ in $G$ expands, then $\phi_{{G}}(\mu)$ is large.  For any linear operator $T$, its $p\rightarrow q$ norm is defined as $\|T\|_{p\rightarrow q} := \max_{v\neq 0}\frac{\|Tv\|_q}{\|v\|_p}$. We will need the small set expansion property of a graph $\mathcal{G}_n$ defined below.

	\paragraph{The graph $\mathcal{G}_n$.}Consider the graph $\mathcal{G}_n$ induced on the set of vertices $\{(A, B) \mid A, B\subseteq [n], A\cap B = \emptyset, |A| = q'n, |B|=(q-q')n\}$ as follows, defined using a parameter $c$ (that is to be thought of as an absolute but small constant).
A random neighbor $(A', B')$ of $(A, B)$ in this graph is a pair where $A'$ is distributed uniformly conditioned on $|A'| = q'n$ , $|A\cap A'| = cq'n$ and $B'\subseteq[n]\setminus A'$ is a uniformly random set of size $(q-q')n$. An alternative view of this game (which will be crucial for us in order to derive small-set expansion results) proceeds by
viewing it as a graph over the multi-slice -- namely the set of strings in $\{0,1,2\}^n$ with prescribed number of coordinates equal to $0$, $1$ and $2$.
Indeed, we map the vertex $(A, B)$ to the string $\x\in \{0,1,2\}^n$ where $x_i = 1$ if $i\in A$, $x_i = 2$ if $i\in B$ and $x_i=0$ if $i\in [n]\setminus (A\cup B)$. The
edges of the graph $\mathcal{G}_n$ then naturally translate to edges over strings: a random neighbour of $x\in \{0,1,2\}^n$ is a string resulting from choosing a subset of
size $cq'n$ of the $1$'s in $x$, and re-sampling the rest of the coordinates so that the number of coordinates equal to $0$, $1$ and $2$ is as required. By abuse of notation we shall denote this graph also as $\mathcal{G}_n$.

The following result from~\cite{BravermanKLM22} in order to give the following bounds on the expansion of $\mathcal{G}_n$.
	\begin{lemma}
		\label{lemma:sse_prem}
		For every $c>0$,  $0<q'< q<1$ such that $q'\leq \frac{9q}{10}$, and $\mu>0$, the graph $\mathcal{G}_n$ defined above has
		$$\phi_{\mathcal{G}_n}(\mu) \geq 1 - \mu^{\Omega\left(\frac{1}{\log(1/cq')^2}\right)}.$$
	\end{lemma}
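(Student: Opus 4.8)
The plan is to deduce Lemma~\ref{lemma:sse_prem} from a hypercontractive/$2\to q$ estimate on the multi-slice, the key analytic input being imported from~\cite{BravermanKLM22}. Let $\pi$ be the uniform measure on $V(\mathcal{G}_n)$; in the multi-cube picture this is the multi-slice of strings in $\{0,1,2\}^n$ with $(1-q)n$ zeros, $q'n$ ones and $(q-q')n$ twos, and let $\mathcal{T}=\mathcal{T}(\mathcal{G}_n)$ be its Markov operator, which is reversible with respect to $\pi$ because the edge distribution on $(x,y)$ is exchangeable. Since $\phi_{\mathcal{G}_n}(\mu)=1-\max_{\pi(S)\le\mu}\langle 1_S,\mathcal{T}1_S\rangle/\pi(S)$, it suffices to prove that $\langle 1_S,\mathcal{T}1_S\rangle\le \pi(S)^{1+\delta}$ for every $S$ with $\pi(S)\le\mu$, where $\delta=\Omega(1/\log(1/cq')^2)$; then $\phi_{\mathcal{G}_n}(\mu)\ge 1-\pi(S)^{\delta}\ge 1-\mu^{\delta}$ using $\pi(S)\le\mu<1$. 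As a sanity check on why such a bound should hold: conditioning on the retained coordinate set $W$ (marginally uniform over $cq'n$-subsets of $[n]$, with the values on $W$ forced to be $1$), the two endpoints of a random edge are conditionally i.i.d.\ uniform on the corresponding sub-slice, so $\langle 1_S,\mathcal{T}1_S\rangle=\E_{W}[p_W^2]$ where $p_W$ is the density of $S$ in that sub-slice and $\E_W[p_W]=\pi(S)$; thus the claim is exactly that pinning a random $cq'n$-subset to all $1$'s rarely blows up the density of $S$, which is a small-set-expansion statement for this restriction operator.

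For the operator bound I would quote the hypercontractive estimate of~\cite{BravermanKLM22} for operators of this ``keep a small fraction of the coordinates, resample the rest'' type on the multi-slice, in the form $\|\mathcal{T}\|_{q^{\ast}\to q}\le 1$ for $q=2+\delta$ with $\delta=\Omega(1/\log(1/cq')^2)$ and $q^{\ast}$ the conjugate exponent. Granting this, Hölder's inequality gives
\[
\langle 1_S,\mathcal{T}1_S\rangle \le \|1_S\|_{q^{\ast}}\,\|\mathcal{T}1_S\|_{q} \le \|\mathcal{T}\|_{q^{\ast}\to q}\,\|1_S\|_{q^{\ast}}^2 \le \pi(S)^{2/q^{\ast}} = \pi(S)^{2-2/q} = \pi(S)^{1+\delta/(2+\delta)},
\]
which is the required inequality after adjusting the absolute constant hidden in $\delta$. (Equivalently, since $\mathcal{T}$ is positive semidefinite one may write $\langle 1_S,\mathcal{T}1_S\rangle=\|\mathcal{T}^{1/2}1_S\|_2^2$ and invoke the ordinary $2\to q$ hypercontractivity of $\mathcal{T}^{1/2}$.) To line the hypotheses up with the exact statement available in~\cite{BravermanKLM22}, I would first pass from the multi-slice to the product measure $\bigl((1-q),q',(q-q')\bigr)^{\otimes n}$ on $\{0,1,2\}^n$ — replacing $\mathcal{T}$ by its product analogue in which a coordinate survives with probability $c$ if it carries value $1$ and with probability $0$ otherwise, so the average survival probability is $cq'$ — using the standard coupling between a multi-slice with composition $\approx n\cdot\bigl((1-q),q',(q-q')\bigr)$ and this product distribution (the same ``going to infinity and back'' device used elsewhere in Section~\ref{sec:dp}), transfer the norm bound there, and carry it back with $o(1)$ error. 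The hypothesis $q'\le \tfrac{9q}{10}$ keeps all three parts of the slice of size $\Omega(q)$, so that no loss beyond the $\log^2$ overhead inherent to multi-slice hypercontractivity arises, and the surviving-fraction parameter governing the estimate is $\asymp cq'$, which is what produces the exponent $1/\log(1/cq')^2$.

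The main obstacle is precisely this analytic input: obtaining (or precisely citing) the hypercontractive estimate with the claimed dependence $\delta=\Omega(1/\log(1/cq')^2)$, since the domain is a multi-slice rather than a product space and the operator is asymmetric, preserving only $1$-coordinates. Concretely, the work on our side is (i) checking that $\mathcal{T}(\mathcal{G}_n)$ is of the form covered by~\cite{BravermanKLM22}, or is dominated in the PSD order by such an operator, in particular handling that the retained set $W$ is constrained to lie inside the $1$-coordinates; (ii) making the multi-slice $\leftrightarrow$ product-measure transfer rigorous with controlled error, which is routine but needs the parameters to stay in the admissible regime guaranteed by $q'\le \tfrac{9q}{10}$; and (iii) bookkeeping the exponent so that it is $\Omega(1/\log(1/cq')^2)$. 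Everything downstream — the Hölder computation and the passage from the operator norm bound to the expansion bound $\phi_{\mathcal{G}_n}(\mu)\ge 1-\mu^{\Omega(1/\log(1/cq')^2)}$ — is elementary.
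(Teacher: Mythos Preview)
Your high-level strategy --- deriving small-set expansion from an operator hypercontractive bound via the standard H\"older step $\langle 1_S,\mathcal{T}1_S\rangle\le\|1_S\|_{q^\ast}^2=\pi(S)^{1+\Omega(\delta)}$ --- is sound, and your observation that conditioning on the retained set $W$ makes the two endpoints conditionally i.i.d.\ is correct. The gap is exactly where you flag it: the paper does \emph{not} quote a direct $\|\mathcal{T}\|_{q^\ast\to q}\le 1$ bound for $\mathcal{T}(\mathcal{G}_n)$ with $\delta=\Omega(1/\log(1/cq')^2)$, and no such statement is available off the shelf in~\cite{BravermanKLM22}. Your proposed product-measure detour would still require proving hypercontractivity of the asymmetric ``keep-a-subset-of-the-$1$-coordinates'' operator with the right exponent, which is essentially the whole content of the lemma.

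The paper instead argues directly via a degree decomposition on the multi-slice. Writing $f=1_S=f_1+f_2$ with $f_1=\mathcal{P}_{\le d}f$ and $f_2=f-f_1$, it combines two separate inputs: (i) an eigenvalue bound $\|\mathcal{T}f_2\|_2\le C(1+\delta)^{-d/\log(1/cq')}\|f_2\|_2$, obtained as a quantitative refinement of Lemma~3.11 of~\cite{BravermanKLM22} via the trace method, using that the induced $3\times 3$ single-coordinate chain has $\lambda_2=1-\Omega(1)$ so that $h=\Theta(\log(1/cq'))$ steps suffice to equilibrate; and (ii) low-degree hypercontractivity on the multi-slice from~\cite{FOW,FOWitcs}, namely $\|\mathcal{P}_{\le d}\|_{2\to 4}\le(1/cq')^{O(d)}$, which via the dual $4/3\to 2$ bound gives $\|f_1\|_2^2\le(1/cq')^{O(d)}\mu^{3/2}$. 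This yields
\[
\phi_{\mathcal{G}_n}(\mu)\ \ge\ 1 - C(1+\delta)^{-d/\log(1/cq')} - (1/cq')^{O(d)}\mu^{1/2},
\]
and choosing $d=\Theta(\log(1/\mu)/\log(1/cq'))$ balances the two error terms at $\mu^{\Omega(1/\log(1/cq')^2)}$. The $\log^2$ in the exponent is precisely the product of the $\log(1/cq')$ penalty in the eigenvalue decay rate and the $\log(1/cq')$ cost per degree in the hypercontractive constant. Both ingredients are stated and used directly on the multi-slice; no passage to a product measure is involved.
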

We give a proof of this lemma in Section~\ref{section:sse}.
	
	\subsubsection{From Uniform Size Distributions to Biased Distributions}
	We need the following two claims from~\cite{BKMcsp3} to move from the uniform setting to the product setting.
	\begin{claim}
		\label{claim:binom}
		Fix $q\in (0,1)$, $t\leq n$ and $N= \omega(n^4)$. We have,
		$$ \frac{{N-n \choose qN-t}}{{N \choose qN}}  = q^t (1-q)^{n-t} (1\pm o(1)).$$
		where the $o(1)$ factor can be taken as $2^{-\Omega(n)}$.
	\end{claim}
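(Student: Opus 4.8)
The plan is to read the left-hand side as a probability and then estimate it by revealing coordinates one at a time. Fix any $t$-element subset $T\subseteq[n]$. Then $\binom{N-n}{qN-t}/\binom{N}{qN}$ is exactly the probability that a uniformly random $qN$-element subset $S\subseteq[N]$ satisfies $S\cap[n]=T$: the denominator counts all such $S$, while the numerator counts those $S$ that contain the $t$ points of $T$, avoid the $n-t$ points of $[n]\setminus T$, and have their remaining $qN-t$ points chosen from $[N]\setminus[n]$. (We may assume $qN\in\mathbb{N}$ by taking $N$ appropriately divisible; replacing $qN$ by $\lfloor qN\rfloor$ only introduces a harmless $1+O(1/N)$ factor in the estimates below. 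Note also that $qN-t\geq 0$ and $(1-q)N-(n-t)\geq 0$ since $t\leq n=o(N)$.)

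To estimate $\Pr[\,S\cap[n]=T\,]$, I would reveal the membership status of the $n$ elements of $[n]$ one by one and apply the chain rule, writing this probability as a product of $n$ conditional probabilities — $t$ of the form ``this element is in $S$'' and $n-t$ of the form ``this element is not in $S$''. At any stage, having already revealed the status of at most $n$ elements of $[N]$, with $a$ of them in $S$ and $b$ of them out (so $a+b\leq n$), the next queried element joins $S$ with probability $\frac{qN-a}{N-a-b}$, which lies in $q\bigl(1\pm O(n/N)\bigr)$, and stays out with probability in $(1-q)\bigl(1\pm O(n/N)\bigr)$; here the hidden constants depend only on the fixed constant $q\in(0,1)$. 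Multiplying the $t$ ``in'' factors and the $n-t$ ``out'' factors yields a main term $q^t(1-q)^{n-t}$ times $\prod_{i=1}^n\bigl(1\pm O(n/N)\bigr)=1\pm O(n^2/N)$, where the last step uses $n^2/N=o(1)$.

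Since $N=\omega(n^4)$ we have $n^2/N=o(1)$, which establishes the claim with an $o(1)$ relative error; and whenever $N$ is taken large enough that $n^2/N=2^{-\Omega(n)}$ — which is consistent with $N=\omega(n^4)$ and holds in the regime in which this claim is applied — the error term is $2^{-\Omega(n)}$, as asserted. I do not expect any genuine obstacle here beyond careful bookkeeping of error terms; the one point deserving attention is that each conditional probability depends on the past only through the small counts $a,b\leq n$, which is precisely what keeps the per-step multiplicative distortion at $O(n/N)$ and hence the accumulated distortion at $O(n^2/N)$ rather than something larger.
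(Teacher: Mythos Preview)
Your proof is correct and, at the computational level, essentially coincides with the paper's. The paper expands the ratio of binomial coefficients directly as
\[
\frac{qN(qN-1)\cdots(qN-t+1)\cdot(N-qN)\cdots(N-qN-n+t+1)}{N(N-1)\cdots(N-n+1)},
\]
factors out $q^t(1-q)^{n-t}$, and bounds the residual product using $1-x\approx e^{-x}$; your probabilistic framing and chain-rule computation produce exactly this same product of $n$ factors, each within $1\pm O_q(n/N)$ of $1$. Your observation that the asserted $2^{-\Omega(n)}$ error genuinely requires $N$ exponentially large (not merely $\omega(n^4)$) is a valid point about the statement itself---the paper's own proof has the same $O(n^2/N)$ bound and the same gap---and your remark that the application allows $N$ to be chosen freely is the correct resolution.
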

	\begin{proof}
		The proof proceeds by a direct calculation and is deferred to Section~\ref{sec:missing_dp}.
	\end{proof}

    With Claim~\ref{claim:binom} in hand, we can now show a coupling between the uniform distribution over subsets of $[N]$ of fixed size, and a distribution
    close to $q$-biased subsets of $[n]$.
	\begin{claim}
		\label{claim:sd_prod_set}
		Fix $q\in (0,1)$ and $N= \omega(n^4)$. Consider the following two distributions on $P([n])$:
		\begin{itemize}
			\item $\mathcal{D}_1$: Select a subset $A\subseteq [n]$ by including $i\in [n]$ to $A$ with probability $q$ for each $i$ independently.
			\item $\mathcal{D}_2$: Select a random subset $S\subseteq [N]$ of size $qN$ and output $S|_{[n]}$.
		\end{itemize}
		Then, the statistical distance between $\mathcal{D}_1$ and $\mathcal{D}_2$ is at most $e^{-n}$.
	\end{claim}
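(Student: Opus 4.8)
For $q \in (0,1)$, $N = \omega(n^4)$, the distributions $\mathcal{D}_1$ (each $i\in[n]$ included independently with probability $q$) and $\mathcal{D}_2$ (restrict a uniform $qN$-subset of $[N]$ to $[n]$) have statistical distance at most $e^{-n}$.

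\paragraph{Plan of proof.} The natural approach is to compute both probability mass functions explicitly and bound their $\ell_1$ difference. For a fixed $A \subseteq [n]$ with $|A| = t$, we have $\mathcal{D}_1(A) = q^t(1-q)^{n-t}$ exactly. For $\mathcal{D}_2$, the event $S|_{[n]} = A$ means $S$ contains exactly the $t$ elements of $A$ from $[n]$ and none of the other $n - t$ elements of $[n]$; the remaining $qN - t$ elements of $S$ are chosen freely from the $N - n$ elements outside $[n]$. Hence $\mathcal{D}_2(A) = \binom{N-n}{qN-t} \big/ \binom{N}{qN}$. Crucially this depends only on $t = |A|$, just as $\mathcal{D}_1(A)$ does. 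Now I would invoke Claim~\ref{claim:binom}, which gives exactly $\binom{N-n}{qN-t}\big/\binom{N}{qN} = q^t(1-q)^{n-t}(1 \pm o(1))$ with the $o(1)$ factor bounded by $2^{-\Omega(n)}$. Therefore $\mathcal{D}_2(A) = \mathcal{D}_1(A)(1 \pm 2^{-\Omega(n)})$ for every $A$, so $|\mathcal{D}_2(A) - \mathcal{D}_1(A)| \leq \mathcal{D}_1(A)\, 2^{-\Omega(n)}$.

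\paragraph{Finishing.} Summing over all $A \subseteq [n]$ gives
\[
\|\mathcal{D}_1 - \mathcal{D}_2\|_{\mathrm{TV}} = \frac{1}{2}\sum_{A \subseteq [n]} |\mathcal{D}_1(A) - \mathcal{D}_2(A)| \leq \frac{1}{2} \cdot 2^{-\Omega(n)} \sum_{A \subseteq [n]} \mathcal{D}_1(A) = \frac{1}{2} \cdot 2^{-\Omega(n)},
\]
which is at most $e^{-n}$ for $n$ sufficiently large (absorbing the constant in the exponent). One small point to be careful about: Claim~\ref{claim:binom} as stated requires $t \leq n$ and also implicitly $qN - t \geq 0$ and $qN - t \leq N - n$; since $t \leq n \ll qN$ and $N = \omega(n^4)$, both constraints hold comfortably for all relevant $t$, so the estimate applies uniformly across the support. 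There is essentially no obstacle here — the entire content is the binomial asymptotic of Claim~\ref{claim:binom}, and the statistical-distance bound is a one-line consequence once one observes that both pmfs are functions of $|A|$ alone. I would write this out in three or four lines in Section~\ref{sec:missing_dp}.
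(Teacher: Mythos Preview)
Your proposal is correct and follows essentially the same approach as the paper: compute the two pointwise probabilities, invoke Claim~\ref{claim:binom} to get $\mathcal{D}_2(A) = \mathcal{D}_1(A)(1 \pm 2^{-\Omega(n)})$, and sum over $A$. The paper's proof is in fact slightly terser than yours, omitting the final summation step and the remark about the range of $t$, but the argument is identical.
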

	\begin{proof}
		We will compare the point-wise probabilities $p_1, p_2: P([n])\rightarrow \mathbb{R}$ assigned by the two distributions $\mathcal{D}_1$ and $\mathcal{D}_2$, respectively. Fix any set $A\subseteq [n]$ of size $t$. We have $p_1(A) = q^t (1-q)^{n-t}$. Now, in order to sample $A$ from $\mathcal{D}_2$, it must be the case that $S|_{[n]}=A$ and therefore, we have
		\begin{align*}
			p_2(A ) &= \frac{{N-n \choose qN-t}}{{N \choose qN}},
		\end{align*}
		which is $q^t (1-q)^{n-t} (1\pm e^{-n})$ using Claim~\ref{claim:binom}.
	\end{proof}

	\subsection{Direct Product in the Uniform Setting: Proof of Theorem~\ref{thm:DPtest}}
	
	Throughout this section, we use $\eps$ to denote the passing probability of the \maintest. Since there is almost a black-box reduction from non-binary alphabet setting to the binary alphabet setting, we first focus on the binary alphabet for simplicity and prove Theorem~\ref{thm:DPtest} when $m=2$. In Section~\ref{sec:general_dp}, we show how to generalize the result for non-binary alphabet by keeping all the parameters asymptotically the same.

\subsubsection{Parameters}
Throughout this section, we are going to use the parameters
\[
        1\leq c_0\ll C_1\ll C_2\ll C_3\ll C_4
\]
\begin{equation}\label{eq:hier_dp}
        0< \eps\leq \nu = \eps^{3}\leq \gamma = \eps^{10},
\end{equation}

\begin{equation}\label{eq:hier_dp2}
        T = \left(\frac{\log(1/\eps)}{q'}\right)^{c_0}\leq h = \frac{C_1T\log(1/\eps)}{q'}\leq R = W = \frac{C_2 h}{q'}
        \leq \tilde{h} = \frac{C_3W^2\log(1/\eps)}{q'}
        \leq \tilde{R} = \tilde{W} = \frac{C_4\tilde{h}}{q'}
\end{equation}
and $q'' = q-q'$.

	\subsubsection{Local structure}\label{sec:dp_lst}
	\label{sec:locals}
	In this section, we prove the local structure stated in Lemma~\ref{lemma:localg} below. We need a few definitions to state the lemma.

	Consider selecting a random set of size $qn$ as follow. First sample a subset $A\subseteq [n]$ of size $q'n$ and then select a set $B\subseteq [n]\setminus A$ of size $(q-q')n$ uniformly at random. Output $(A,B)$. We need the following few definitions that are similar to the definitions from~\cite{ImpagliazzoKW12}, adapted towards analyzing the \maintest.
	
	\begin{definition}(consistency)
	Fix a set $(A,B)$. A subset $B'\subseteq [n]\setminus A$ is said to be $t$-consistent with $(A,B)$ if $F[A, B]|_{A}\overset{\leq t}{\neq} F[A, B']|_{A}$. Let
    $\Cons_t(A,B)$ be the set of all the sets that are $t$-consistent with  $(A,B)$.
	\end{definition}
	
    We say that $(A,B)$ is good if $\Cons_t(A,B)$  has a significant size. More precisely,
	\begin{definition}(goodness)
		A set $(A,B)$ is called $(\eps/2,t)$-good if
		\[
        \Pr_{B'\subseteq [n]\setminus A}{[B'\in \Cons_t(A,B)]} \geq \eps/2.
        \]
	\end{definition}
	
	\newcommand{\rr}{R}
    We next define the notion of excellence, for which we need to describe an auxiliary distribution over sets $(E,D_1,D_2)$.
    Given $A\subseteq [n]$, select two random subsets $B_1, B_2\subseteq [n]\setminus A$ independently, each of size $q''n$ and let $E$ be a random subset of $B_1\cap B_2$ of size $q''n/\rr$. We take $D_1 = B_1\setminus E$ and $D_2 = B_2\setminus E$.
    We remark that the probability that $|B_1\cap B_2|$ is smaller than $q''n/\rr$ is at most $\exp(-q''^2n)$, hence so long as $\gamma = \omega(\exp(-q''^2n))$ it will
    be absorbed into $\gamma$ in the definition below.
	\begin{definition}(excellence)
		A set $(A,B)$ is called $(\eps/2, t,\rr,h, \gamma)$-excellent if it is $(\eps/2, t)$-good and
		$$\Pr_{E, D_1, D_2}{[ (E, D_i)\in \Cons_t(A,B) \mbox{ for i=1,2 } \ \land \ F[A,  E,  D_1]|_{E} \overset{>h}{\neq} F[A,  E,  D_2]|_{E}]} \leq \gamma.$$
	\end{definition}

	Fix any $(\eps/2, T, \rr, h, \gamma)$-excellent pair $(A_0,B_0)$. We define a function $g_{A_0,B_0}:[n]\rightarrow \{0,1\}$ based on the majority vote of the table $F$ restricted to the sets in $\Cons_t(A_0,B_0)$. More formally, for $x\in [n]\setminus A_0$, we set
	$$ g_{A_0,B_0}(x) := \mathop{\mathsf{Majority}}_{B\in \Cons_t(A_0, B_0) \mid B\ni x} F[A_0,B]|_{x}.$$
	If there is no such $B$ that contains $x$ then we set $g_{A_0,B_0}(x) = 0$. We also set $ g_{A_0,B_0}(A_0) = F[A_0,B_0]|_{A_0}$.

	Based on these definitions, we prove the following local structure, which is the main lemma from this subsection. This is called a local structure as the functions $g_{A_0,B_0}$ enjoy strong consistency (similar to what we need for the global function in Theorem~\ref{thm:DPtest}) but it is weaker: the consistent is
only guaranteed to be {\em local}, namely within $\Cons_t(A_0,B_0)$.
	
	\newcommand{\fillin}{\mathbf{\boxplus}}
	\begin{lemma}
		\label{lemma:localg}
		For all $\eps>0$, if $(A_0,B_0)$ is $(\eps/2, T,\rr,h, \gamma)$-excellent then
		\[
        \Pr_{B\in \Cons_T(A_0,B_0)}{[F[A_0, B]|_{B} \overset{>W^{2}}{\neq} g_{A_0,B_0}(B)]} \leq \nu.
        \]
		Furthermore, a random pair $(A_0, B_0)$ is $(\eps/2, T, \rr, h, \gamma)$-excellent with probability at least $\frac{\eps}{2} -\frac{2^{-\Omega(h)}}{\gamma}\geq \frac{\eps}{4}$.
	\end{lemma}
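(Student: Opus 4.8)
The plan is to establish the two assertions separately, following the template of~\cite{ImpagliazzoKW12} but with the quantitative refinements needed here. Goodness is immediate: the acceptance probability of the \maintest with parameters $(q,q',T)$ equals $\Expect{A_0,B_0}{\Prob{B_1}{B_1\in \Cons_T(A_0,B_0)}}\geq \eps$, so by Markov at least an $\eps/2$ fraction of pairs $(A_0,B_0)$ are $(\eps/2,T)$-good. For the excellence clause it then suffices to bound by $2^{-\Omega(h)}$ the unconditional quantity
\[
q^\star:=\Prob{A_0,B_0,B_1,B_2,E}{B_1,B_2\in \Cons_T(A_0,B_0)\ \land\ F[A_0,B_1]|_E\overset{>h}{\neq}F[A_0,B_2]|_E},
\]
where $B_1=E\cup D_1$, $B_2=E\cup D_2$, $E\subseteq B_1\cap B_2$ are drawn as in the excellence experiment; Markov over $(A_0,B_0)$ then shows that all but a $2^{-\Omega(h)}/\gamma$ fraction of good pairs are also excellent, giving excellence probability at least $\eps/2-2^{-\Omega(h)}/\gamma\geq\eps/4$ since $h\gg\log(1/\eps)$.

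To bound $q^\star$, condition on the query sets $S_1=A_0\cup B_1$ and $S_2=A_0\cup B_2$; this fixes the two table entries and the disagreement set $\Delta=\{i\in S_1\cap S_2:F[S_1]_i\neq F[S_2]_i\}$, and a direct symmetry argument shows that conditioned on $S_1,S_2$ the set $A_0$ is a uniformly random $(q'n)$-subset of $S_1\cap S_2$, while $E$ is a uniformly random $(q''n/R)$-subset of $(S_1\cap S_2)\setminus A_0=B_1\cap B_2$. Writing $L=\card{B_1\cap B_2}$ (which, outside a $2^{-\Omega(n)}$-event, equals $\Theta(q''^2 n)$), we split: if $\card{\Delta\setminus A_0}\cdot\tfrac{q''n}{RL}\leq h/2$ then $\card{E\cap\Delta}$ is hypergeometric with mean $\leq h/2$ and exceeds $h$ with probability $2^{-\Omega(h)}$; otherwise $\card{\Delta}$ is so large that $\card{A_0\cap\Delta}$ is hypergeometric with mean $\gg 2T$, whereas $B_1,B_2\in \Cons_T(A_0,B_0)$ forces $\card{A_0\cap\Delta}\leq 2T$ by the triangle inequality, an event of probability $2^{-\Omega(h)}$ by a Chernoff bound for the hypergeometric distribution (here one uses $h\gg T$ and $R=W=C_2h/q'$ so that the relevant mean is $\Omega(h^2)$). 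Summing the two cases gives $q^\star\leq 2^{-\Omega(h)}$.

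For the main assertion fix an excellent pair $(A_0,B_0)$, write $g=g_{A_0,B_0}$, let $C=\Cons_T(A_0,B_0)$ (of density $\mu\geq\eps/2$ by goodness), and for $B\in C$ put $Z(B)=\{x\in B:F[A_0,B]|_x\neq g(x)\}$; the goal is $p:=\Prob{B\in C}{\card{Z(B)}>W^2}\leq\nu$. Run the excellence experiment and condition on $B_1\in C$ with $\card{Z(B_1)}>W^2$ (probability $\mu p$); since $E$ is then essentially a uniform $(q''n/R)$-subset of $B_1$, the quantity $\card{E\cap Z(B_1)}$ is hypergeometric with mean $\card{Z(B_1)}/R>W^2/R=W$, hence is at least $W/2$ with probability $1-2^{-\Omega(W)}$. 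Fixing such $B_1,E$, the crux is a second-moment concentration over $B_2$. For each $x\in E\cap Z(B_1)$, because $g(x)$ is the plurality of $F[A_0,B]|_x$ over $B\in C$ containing $x$ and the alphabet is binary, $\Prob{B_2\in C,\,E\subseteq B_2}{F[A_0,B_2]|_x=g(x)}\geq\tfrac12-o(1)$, the $o(1)$ accounting for replacing ``$x\in B_2$'' by ``$E\subseteq B_2$'' via the sampling lemma (Lemma~\ref{lemma:sampler}). Setting $\mathrm{cnt}=\#\{x\in E\cap Z(B_1):F[A_0,B_2]|_x\neq F[A_0,B_1]|_x\}=\sum_{x}\mathbf 1[F[A_0,B_2]|_x=g(x)]$ we get $\Expect{B_2}{\mathrm{cnt}}\geq(\tfrac12-o(1))\card{E\cap Z(B_1)}\geq\Omega(W)$; moreover, since every one–coordinate probability is $\geq\tfrac12-o(1)$, each two–coordinate probability is $\leq\min\leq O(1)\cdot\big(\text{product of the two}\big)$, so $\Expect{B_2}{\mathrm{cnt}^2}=O(1)\cdot\Expect{B_2}{\mathrm{cnt}}^2$, and Paley--Zygmund yields $\Prob{B_2}{\mathrm{cnt}>h}\geq\Omega(1)$ (using $\tfrac12\Expect{B_2}{\mathrm{cnt}}\geq\Omega(W)\gg h$).

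Chaining the estimates, the excellence bad event $\{B_1,B_2\in C\}\cap\{F[A_0,B_1]|_E\overset{>h}{\neq}F[A_0,B_2]|_E\}$, which is exactly $\{B_1,B_2\in C,\ \mathrm{cnt}>h\}$, has probability at least $\mu p\cdot\big(1-2^{-\Omega(W)}\big)\cdot\Omega(\mu)\cdot\Omega(1)=\Omega(\mu^2p)$; comparing with the excellence guarantee that it is at most $\gamma$ gives $p\leq O(\gamma/\mu^2)\leq O(\gamma/\eps^2)$, which is at most $\nu$ for the parameters fixed in~\eqref{eq:hier_dp},~\eqref{eq:hier_dp2}. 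I expect the concentration step to be the main obstacle: a Markov bound on $\mathrm{cnt}$ is useless because $\mathrm{cnt}$ can be as large as $\card{E}\approx q''n/R\gg h$, so the whole argument hinges on exploiting that the per–coordinate ``disagrees with $F[A_0,B_1]$'' probabilities are $\geq\tfrac12$ (a consequence of majority decoding) in order to bound the second moment by $O(1)$ times the squared mean. The remaining points are bookkeeping: that in the excellence experiment $E$ is, up to statistically negligible corrections from the concentration of $\card{B_1\cap B_2}$ and from the sampling lemma, a uniform subset of $B_1$ (resp.\ of $B_1\cap B_2$), and that conditioning $B_2$ on $E\subseteq B_2$ rather than on a single coordinate changes the plurality probability and the density of $C$ by only $o(1)$.
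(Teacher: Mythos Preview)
Your argument for the ``Furthermore'' clause is essentially correct and follows the paper's Claims~\ref{claim:good} and~\ref{claim:excellent}: after fixing $S_1,S_2$ (hence $\Delta$), the randomness of $A_0$ inside $S_1\cap S_2$ plus Chernoff gives the $2^{-\Omega(h)}$ bound. One imprecision: your case split is on $|\Delta\setminus A_0|$, which already depends on $A_0$, so in case~2 you cannot then treat $|A_0\cap\Delta|$ as a fresh hypergeometric variable. The easy fix is to split on $|\Delta|$ itself (or, as the paper does, resample by first choosing $A'=A_0\cup E$, then $D_1,D_2$, and only then $A_0$ uniformly inside $A'$).

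For the main assertion there is a genuine gap. Your Paley--Zygmund step needs, for each $x\in E\cap Z(B_1)$, that
\[
\Pr_{B_2\in C,\ E\subseteq B_2}\bigl[F[A_0,B_2]|_x=g(x)\bigr]\ \geq\ \tfrac12-o(1).
\]
The plurality definition only gives this under the conditioning ``$x\in B_2$''; the single-point sampler Lemma~\ref{lemma:sampler} does not control the effect of conditioning on the entire set $E$ of size $q''n/R$, and one can construct $C$ for which this conditioning moves the per-coordinate probabilities arbitrarily (e.g.\ split $C$ into two classes determined by membership of a fixed coordinate $y$; whenever $y\in E$ the conditional collapses to a single class). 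The same issue afflicts your ``$\Omega(\mu)$'' factor for $\Pr[B_2\in C\mid E\subseteq B_2]$. Your second-moment inequality $\E[\mathrm{cnt}^2]\leq O(1)\,\E[\mathrm{cnt}]^2$ is itself valid, but it rests on this unestablished premise.

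The paper avoids this by reversing the order: sample $B'$ \emph{before} $E$, use only single-point conditioning, and replace Paley--Zygmund by a first-moment bound. Concretely (Claims~\ref{claim:localg_c},~\ref{claim:localg}): for all but $O(\log(1/\eps)/q'')$ coordinates $x$ one has $\Pr_{B'\in C}[x\in B'\text{ and }F[A_0,B']|_x=g(x)]\geq q''/10$, so for bad $B$ the set $Y_{B'}=\{x\in X_B':x\in B',\,F[A_0,B']|_x=g(x)\}$ has $\E_{B'\in C}[|Y_{B'}|]\geq (q''/10)|X_B'|$; since $|Y_{B'}|\leq|X_B'|$, averaging gives $\Pr_{B'\in C}[|Y_{B'}|\geq (q''/20)|X_B'|]\geq q''/20$. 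Only then is $E\subseteq B\cap B'$ sampled, and a hypergeometric bound places $\Omega(q''W^2/R)\geq h$ disagreements inside $E$. Removing the $B'\in C$ conditioning (factor $\geq\eps/2$) yields bad-event probability $\gtrsim\nu\eps^2 q''>\gamma$, the contradiction with excellence.
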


%
%
For notational convenience, in the remaining part of this subsection, we call a pair $(A,B)$ good if it is $(\eps/2,T)$-good. Similarly, we call a pair $(A, B)$ excellent if it is $(\eps/2, T, \rr, h, \gamma)$-excellent. We start by showing that a random pair is good and excellent with noticeable probability.
	\begin{claim}
		\label{claim:good}
		If $\Pr_{A_0, B_0, B_1}{[F[A_0, B_0]|_{A_0} \overset{\leq T}{\neq} F[A_0, B_1]|_{A_0}]}\geq \eps$, then a random $(A_0, B_0)$ is $(\eps/2,T)$-good with probability at least $\eps/2$.
	\end{claim}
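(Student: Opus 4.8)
The claim to prove is Claim~\ref{claim:good}: if the \maintest passes with probability at least $\eps$, then a random pair $(A_0,B_0)$ is $(\eps/2,T)$-good with probability at least $\eps/2$. This is a routine averaging argument of the kind that appears repeatedly in direct product testing arguments (cf.~\cite{ImpagliazzoKW12}), so the proof should be short.

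\medskip

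The plan is as follows. First I would unpack the definitions so that both the hypothesis and the conclusion are phrased in terms of the same underlying probability space. The hypothesis says that when we sample $A_0$ of size $q'n$, then $B_0\subseteq [n]\setminus A_0$ of size $q''n$, then $B_1\subseteq [n]\setminus A_0$ of size $q''n$ independently of $B_0$, the event
\[
F[A_0,B_0]|_{A_0}\overset{\leq T}{\neq} F[A_0,B_1]|_{A_0}
\]
holds with probability at least $\eps$. Note that conditioned on $A_0$, the pair $(B_0,B_1)$ consists of two i.i.d.\ uniform subsets of $[n]\setminus A_0$ of size $q''n$, and this event is exactly the event ``$B_1\in\Cons_T(A_0,B_0)$'' in the notation of the section. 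On the other hand, $(A_0,B_0)$ is $(\eps/2,T)$-good precisely when, conditioned on that choice of $(A_0,B_0)$, a fresh uniform $B'\subseteq[n]\setminus A_0$ of size $q''n$ lands in $\Cons_T(A_0,B_0)$ with probability at least $\eps/2$; since $B_1$ is distributed exactly like such a $B'$, goodness of $(A_0,B_0)$ is the statement
\[
\Pr_{B_1}\!\big[F[A_0,B_0]|_{A_0}\overset{\leq T}{\neq} F[A_0,B_1]|_{A_0}\big]\geq \tfrac{\eps}{2}.
\]

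\medskip

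With this reformulation, the claim is an immediate instance of the following elementary fact: if $X,Y$ are independent random variables and $E(X,Y)$ is an event with $\Pr[E]\geq\eps$, then $\Pr_X\big[\Pr_Y[E(X,Y)\mid X]\geq \eps/2\big]\geq \eps/2$. Here I take $X=(A_0,B_0)$ and $Y=B_1$ (which are independent given the structure of the sampling), and $E$ is the agreement event above. To see the fact, let $p(X)=\Pr_Y[E(X,Y)\mid X]$ and let $G$ be the event $\{p(X)\geq\eps/2\}$; then
\[
\eps\leq \E_X[p(X)] = \E_X[p(X)\mathbf{1}_{G}] + \E_X[p(X)\mathbf{1}_{\overline G}] \leq \Pr[G]\cdot 1 + 1\cdot\tfrac{\eps}{2},
\]
using $p(X)\leq 1$ always and $p(X)<\eps/2$ on $\overline G$. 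Rearranging gives $\Pr[G]\geq \eps/2$, which is exactly the conclusion. (Equivalently one can cite Markov's inequality applied to the nonnegative random variable $1-p(X)$, or the averaging argument used verbatim in the proof of Claim~\ref{claim:restriction_inverse_good}.)

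\medskip

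There is essentially no obstacle here; the only point requiring a line of care is checking that $B_1$ used in the test has the same conditional distribution as the ``test set'' $B'$ in the definition of goodness, i.e.\ that both are uniform over size-$q''n$ subsets of $[n]\setminus A_0$ and independent of $(A_0,B_0)$. This is immediate from the sampling procedure in the \maintest (one first picks $A_0\cup B_0$, equivalently $A_0$ then $B_0$, and then picks $B_1$ fresh inside the complement of $A_0$). Once that is noted, the averaging computation above closes the proof.
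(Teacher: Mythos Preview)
Your proposal is correct and is exactly the simple averaging argument the paper has in mind. The only content is the splitting of $\E_X[p(X)]$ according to whether $p(X)\geq\eps/2$, which you carry out cleanly.
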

	\begin{proof}
		The proof is by a simple averaging argument.
	\end{proof}
	
	The next claim shows that almost all the good pairs are excellent.
	\begin{claim}
		\label{claim:excellent}
		It holds that
		\[
        \Pr_{A_0, B_0}{[(A_0, B_0)\text{ is $(\eps/2, T)$-good but not $(\eps/2, T, \rr,h, \gamma)$-excellent}]} \leq \frac{2^{-\Omega(h)}}{\gamma}.
        \]
	\end{claim}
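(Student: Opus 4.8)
The claim to prove is Claim~\ref{claim:excellent}: almost all $(\eps/2,T)$-good pairs are in fact $(\eps/2,T,\rr,h,\gamma)$-excellent. The plan is to bound, for a \emph{fixed} good pair $(A_0,B_0)$, the probability (over the auxiliary draw of $E,D_1,D_2$) of the ``bad event'' appearing in the definition of excellence, namely that both $(E,D_i)\in\Cons_T(A_0,B_0)$ yet $F[A_0,E,D_1]|_{E}\overset{>h}{\neq}F[A_0,E,D_2]|_{E}$; then take expectation over $(A_0,B_0)$ and apply Markov's inequality. Write $p_{A_0,B_0}$ for this bad-event probability. A good pair fails to be excellent precisely when $p_{A_0,B_0}>\gamma$, so by Markov it suffices to show $\Expect{A_0,B_0}{p_{A_0,B_0}\cdot 1_{(A_0,B_0)\text{ good}}}\leq 2^{-\Omega(h)}$; in fact I will show the stronger statement $\Expect{A_0,B_0}{p_{A_0,B_0}}\leq 2^{-\Omega(h)}$ without even using goodness, which then gives $\Prob{}{\text{good but not excellent}}\leq 2^{-\Omega(h)}/\gamma$ as stated.

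The core estimate is thus a statement about the auxiliary distribution alone: sampling $A_0$, then $B_1,B_2\subseteq[n]\setminus A_0$ independently of size $q''n$, then $E\subseteq B_1\cap B_2$ of size $q''n/\rr$ and $D_i=B_i\setminus E$, the probability that $F[A_0,E,D_1]|_E$ and $F[A_0,E,D_2]|_E$ disagree on more than $h$ coordinates of $E$ is $2^{-\Omega(h)}$. The key observation is that, conditioned on the sets $A_0$, $E$, $D_1$, $D_2$, the pair $(A_0\cup E\cup D_1, A_0\cup E\cup D_2)$ is a pair of sets each of size $qn$ sharing exactly $A_0\cup E$, so this is exactly (a relabelling of) a pair drawn from a distribution comparable to $\mathcal{D}_{q,q'}$-type intersecting pairs, and $F$ passes \maintest-type agreement only rarely on more than $T$ disagreements on the shared part $A_0$. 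More usefully: $E$ itself is a uniformly random subset of $[n]\setminus A_0$ of size $q''n/\rr$ that is \emph{independent} of the identities of $D_1,D_2$ given its size, in the sense that each coordinate of $E\cup D_1$ is, conditionally, a ``fresh'' random coordinate. So I would fix the multiset of coordinates on which $F[A_0,E,D_1]$ and $F[A_0,E,D_2]$ disagree (as functions on $B_1$, resp.\ $B_2$, restricted to $B_1\cap B_2$) and then argue: since $E$ is a random $q''n/\rr$-subset of $B_1\cap B_2$ chosen \emph{after} the disagreement pattern on $B_1\cap B_2$ is determined, the number of disagreements landing inside $E$ is hypergeometrically concentrated; and the $B_1,B_2$ disagreement fraction on $B_1\cap B_2$ must itself be small on average because $F$ agrees well on shared coordinates. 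Combining, the expected number of disagreements inside $E$ is $O(q''n/\rr)\cdot(\text{small fraction})$, and a Chernoff/Hoeffding bound for the hypergeometric distribution gives that exceeding $h$ has probability $2^{-\Omega(h)}$, using $h\gg$ (expected count) which holds by the hierarchy~\eqref{eq:hier_dp2} (indeed $h = \frac{C_1 T\log(1/\eps)}{q'}$ and $\rr=W=\frac{C_2 h}{q'}$, so $h$ is comfortably larger than $q''n/\rr$ times the relevant agreement-failure fraction once $C_1,C_2$ are large enough).

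The one subtlety to handle carefully is the direction of conditioning: to say ``$E$ is chosen after the disagreement pattern'' I need to resample in the right order, i.e.\ first draw $A_0,B_1,B_2$ (which fixes $F[A_0\cup B_1]$, $F[A_0\cup B_2]$ and hence the disagreement set $\mathcal{E}=\{i\in B_1\cap B_2 : F[A_0\cup B_1]|_i\neq F[A_0\cup B_2]|_i\}$), then draw $E\subseteq B_1\cap B_2$ of the prescribed size; since the distribution of $(A_0,B_1,B_2,E)$ in the definition of excellence is exactly this, nothing is lost. Given $(A_0,B_1,B_2)$, the count $|E\cap\mathcal{E}|$ is $\mathrm{Hypergeometric}(|B_1\cap B_2|, |\mathcal{E}|, q''n/\rr)$, with mean $\frac{q''n/\rr}{|B_1\cap B_2|}|\mathcal{E}|\leq \frac{|\mathcal{E}|}{\rr}$ (using $|B_1\cap B_2|\approx q''^2 n\geq q''n/\rr$ with overwhelming probability, the exceptional event being $\exp(-\Omega(q''^2 n))$ which is absorbed into $2^{-\Omega(h)}$ since $h = o(n)$). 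It remains to control $\E_{A_0,B_1,B_2}[|\mathcal{E}|]$, or rather to show that $|\mathcal{E}|$ is small except on a $2^{-\Omega(h)}$-probability event: this follows because $F[A_0\cup B_1]|_{A_0}$ and $F[A_0\cup B_2]|_{A_0}$ agree on all but (by goodness / the overall hypothesis, via Lemma~\ref{lemma:sampler}-type sampling between $A_0\cup B_1$ and its shared coordinates) $O(T)$ locations with probability $\eps^{O(1)}$, and more to the point the \emph{unconditional} disagreement rate between two random $qn$-sets on their overlap is governed by the failure probability of the agreement test. If the disagreement pattern is larger, \maintest would fail — but that happens with probability bounded away from the required error only in a controlled way; here I would instead invoke directly that for a good $(A_0,B_0)$, typical consistent $B',B''$ have $F[A_0\cup B']$ and $F[A_0\cup B'']$ disagreeing on few coordinates on $B'\cap B''$, which is exactly the kind of fact extracted in the analogous step of~\cite{ImpagliazzoKW12, BKMcsp3}. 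The main obstacle, and where I'd spend the most care, is making this last ``the shared-coordinate disagreement of $B_1$ vs.\ $B_2$ is small except with probability $2^{-\Omega(h)}$'' rigorous with the right quantitative dependence — everything else is a routine hypergeometric tail bound plus union bound and Markov.
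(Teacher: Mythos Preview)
Your overall shape—bound $\Expect{A_0,B_0}{p_{A_0,B_0}}$ and apply Markov—is right, but the way you try to bound $p_{A_0,B_0}$ has a genuine gap. You fix $(A_0,B_1,B_2)$, form the disagreement set $\mathcal{E}\subseteq B_1\cap B_2$, then sample $E$ and use a hypergeometric tail. For this to give $2^{-\Omega(h)}$ you need $|\mathcal{E}|$ to be small (say $O(Rh)$), and you spend the last paragraph trying to justify that. But there is no such bound available at this stage: the \maintest only constrains disagreement on $A_0$, not on $B_1\cap B_2$, so ``the unconditional disagreement rate on the overlap is governed by the failure probability of the agreement test'' is simply false. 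Your fallback—``for a good $(A_0,B_0)$, typical consistent $B',B''$ disagree on few coordinates of $B'\cap B''$''—is essentially the content of the local structure (Claim~\ref{claim:localg}), which is proved \emph{using} Claim~\ref{claim:excellent}, so invoking it here would be circular.

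The idea you are missing is to resample in the \emph{other} direction. The bad event includes both $(E,D_i)\in\Cons_T(A_0,B_0)$ and large disagreement on $E$; the first of these says the disagreement of $F[A_0\cup E\cup D_1]$ and $F[A_0\cup E\cup D_2]$ on $A_0$ is at most $2T$, while the second says the disagreement on $A_0\cup E$ is $>h$. So instead of fixing $A_0$ and drawing $E$, fix $A'=A_0\cup E$ together with $D_1,D_2$ (which determines the disagreement set $A''\subseteq A'$ with $|A''|>h/2$), and then draw $A_0$ as a uniformly random $q'n$-subset of $A'$. Since $|E|=q''n/R\ll q'n$, the set $A_0$ is most of $A'$, so $\E[|A''\cap A_0|]\geq h/4$; Chernoff then gives $\Pr[|A''\cap A_0|\leq 2T]\leq 2^{-\Omega(h)}$. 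This is exactly the paper's argument, and it uses the consistency condition on $A_0$ directly rather than trying to control the (uncontrollable) disagreement on $B_1\cap B_2$.
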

	\begin{proof}
		Consider the following two events.
		\begin{enumerate}
			\item Event $Z_1$: $(A_0, B_0)$ is good but
			\[
        \Pr_{E, D_1, D_2}{[ (E, D_i)\in \Cons_T(A_0,B_0) \text{ for $i=1,2$ } \land F[A_0, E, D_1]|_{E} \overset{>h}{\neq} F[A_0, E, D_2]|_{E}]} > \gamma.
        \]
			
			\item Event $Z_2$: $(A_0, B_0)$ is good, $(E, D_i)\in \Cons_T(A_0, B_0)$ for $i=1,2$ and
			\[
        F[A_0, E, D_1]|_{A_0\cup E} \overset{> h/2}{\neq} F[A_0, E, D_2]|_{A_0\cup E}.
        \]
		\end{enumerate}
		We wish to upper bound $\Pr[Z_1]$, and towards this end we write:
        \[
        \Pr[Z_1] = \Pr[Z_1\land Z_2]/\Pr[Z_2 \mid Z_1]\leq  \Pr[Z_2]/\Pr[Z_2 \mid Z_1],
        \]
        and we give an upper bound on the numerator as well as a lower bound on the denominator.
        Note that $\Pr[Z_2\mid Z_1]\geq \gamma$, and we now upper bound $\Pr[Z_2]$. The sets from the event $Z_2$ can be equivalently sampled as follows. First, sample a random subset $A'$ of $[n]$ of size $q'n+\frac{q''n}{\rr}$ (which is to be thought of as $A_0\cup E$), and then pick random sets $D_1, D_2\subseteq [n]\setminus A'$ of size $q''n-\frac{q''n}{\rr}$, conditioned on the event $F[A', D_1]|_{A'} \overset{> h/2}{\neq} F[A',D_2]|_{A'}$. Let $A''\subseteq A'$ be the set of coordinates $x\in A'$ where $F[A',D_1](x)\neq F[A',D_2](x)$,
so that $\card{A''}\geq h/2$. Taking $A_0$ to be a random subset of $A'$ of size $q'n$ and setting $E = A'\setminus A_0$. If $(E, D_i)\in \Cons_T(A_0, B_0)$ for $i=1,2$,
we get that $F[A_0, E, D_1]|_{A_0} \overset{\leq  2T}{\neq} F[A_0, E, D_2]|_{A_0}$ and hence $\card{A''\cap A_0}\leq  2T$. On the other hand, the
expected size of $|A''\cap A_0|$ is at least $h/4$, and so by Chernoff's inequality the probability that $|A''\cap A_0|\leq  2T$ is at most $2^{-\Omega(h)}$.
	\end{proof}
	Next, the following claim asserts that for an excellent pairs $(A_0, B_0)$, the function $g_{A_0,B_0}$ enjoys strong agreement with $F$ inside $\Cons_T(A_0, B_0)$.
	
	\begin{claim}
		\label{claim:localg}
		If $(A_0,B_0)$ is excellent then $\Pr_{B\in \Cons_T(A_0,B_0)}{[F[A_0, B]|_{B} \overset{>W^{2}}{\neq} g_{A_0,B_0}(B)]} \leq \nu$.
	\end{claim}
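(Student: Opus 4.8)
Looking at Claim~\ref{claim:localg}, I need to prove that for an excellent pair $(A_0, B_0)$, the function $g_{A_0,B_0}$ agrees with $F[A_0,B]$ on all but $W^2$ coordinates for all but a $\nu$ fraction of $B \in \Cons_T(A_0,B_0)$.

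\medskip

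\noindent\textbf{Proof proposal.} The plan is to follow the template of the corresponding argument in~\cite{ImpagliazzoKW12,BKMcsp3}, adapted to the \maintest. First I would set up the counting framework: for a set $B\in \Cons_T(A_0,B_0)$, say a coordinate $x\in B$ is \emph{bad} if $F[A_0,B]|_x \neq g_{A_0,B_0}(x)$. By definition of $g_{A_0,B_0}$ as a plurality vote over sets in $\Cons_T(A_0,B_0)$ containing $x$, if $x$ is bad for $B$, then among sets $B'\in \Cons_T(A_0,B_0)$ with $x\in B'$, at least a constant fraction (say $1/m = 1/2$ in the binary case) disagree with $F[A_0,B]|_x$; that is, $F[A_0,B']|_x \neq F[A_0,B]|_x$ for a noticeable fraction of such $B'$. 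I would then argue via a sampling/second-moment step: if $B$ had many bad coordinates (more than $W^2$), then a random $B'\in\Cons_T(A_0,B_0)$ would, with noticeable probability, share many coordinates with the bad set of $B$ (using Lemma~\ref{lemma:sampler}, the sampling lemma on the inclusion graph, to control how $B'$ intersects a fixed set of size $>W^2$ inside $[n]\setminus A_0$), and on a constant fraction of those shared bad coordinates $F[A_0,B']$ and $F[A_0,B]$ disagree.

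\medskip

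\noindent The core of the argument is then to derive a contradiction with the \emph{excellence} hypothesis. Suppose more than a $\nu$ fraction of $B\in\Cons_T(A_0,B_0)$ have more than $W^2$ bad coordinates. I would pick such a $B$ and two random ``probes'' $B_1', B_2'$ drawn from $\Cons_T(A_0,B_0)$; the key is to realize the distribution over $(E, D_1, D_2)$ appearing in the definition of excellence by an appropriate coupling — $E$ plays the role of a shared random subset of $B_1'\cap B_2'$ of size $q''n/R$, and $D_i = B_i'\setminus E$. Since both $B_1'$ and $B_2'$ disagree with $F[A_0,B]$ on a constant fraction of the bad coordinates of $B$, and these bad coordinates number more than $W^2 \gg h R/q''$ (by the parameter hierarchy~\eqref{eq:hier_dp2}), a Chernoff bound shows that with probability bounded away from $0$, the shared part $E$ contains more than $h$ coordinates on which $F[A_0,E,D_1]|_E$ and $F[A_0,E,D_2]|_E$ both differ from $F[A_0,B]|_E$ and — crucially — differ from \emph{each other} (the last point needs a small case analysis: if both probes disagree with $F[A_0,B]$ in the binary case they must agree with each other, so for $m=2$ I would instead track that each $B_i'$ disagrees with $g_{A_0,B_0}$, which is single-valued, hence arrange that on a bad coordinate the plurality value and the probe value are both pinned down, forcing $F[A_0,E,D_1]$ and $F[A_0,E,D_2]$ to disagree with the plurality but with the probe values being \emph{equal} to it is the wrong direction — so the right formulation is to show $F[A_0,E,D_1]|_E \overset{>h}{\neq} F[A_0,E,D_2]|_E$ fails to be rare, contradicting excellence). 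This contradicts the excellence inequality bound of $\gamma$, provided $\nu$ times the constant-fraction probabilities exceeds $\gamma$, which holds since $\gamma = \eps^{10}$ and $\nu = \eps^3$.

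\medskip

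\noindent The main obstacle I anticipate is making the ``disagreement propagates'' step quantitatively tight in the binary ($m=2$) case: when two strings both disagree with a reference binary string on a coordinate, they agree with \emph{each other} there, which is the opposite of what a naive argument wants. The fix (as in~\cite{ImpagliazzoKW12}) is that $g_{A_0,B_0}$ is defined by plurality, so a bad coordinate $x$ of $B$ means the \emph{plurality} value differs from $F[A_0,B]|_x$; this plurality value is achieved by at least a $1/2$-fraction (in fact the plurality itself) of consistent sets through $x$, and one compares $F[A_0,B]|_x$ against the plurality rather than against another probe. The clean way to route the contradiction through excellence is then to track, for the two probes $B_1', B_2'$, the coordinates where one agrees with $g_{A_0,B_0}(x)$ and the other agrees with $F[A_0,B]|_x \ne g_{A_0,B_0}(x)$ — on such coordinates the two probes disagree with each other — and show by sampling and Chernoff that more than $h$ such coordinates land in $E$ with non-negligible probability. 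This requires carefully choosing which probe is ``aligned'' with $g$ and which with $F[A_0,B]$, and a union bound over the $O(1)$ alphabet-value patterns handles the general alphabet case in Section~\ref{sec:general_dp}. The remaining steps (applying Lemma~\ref{lemma:sampler} to control intersection sizes, and the parameter bookkeeping via~\eqref{eq:hier_dp},~\eqref{eq:hier_dp2}) are routine once this combinatorial core is set up correctly.
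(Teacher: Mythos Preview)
Your overall framework is right — assume more than a $\nu$ fraction of $B\in\Cons_T(A_0,B_0)$ have more than $W^2$ bad coordinates, and derive a violation of excellence — but your routing of the contradiction through two fresh probes $B_1',B_2'$ has a genuine gap, and your proposed fix does not close it. The issue is exactly the one you flag: on a bad coordinate $x$ of $B$, the majority value is $g_{A_0,B_0}(x)$, so each fresh probe drawn from $\Cons_T$ is \emph{likely} to take the value $g_{A_0,B_0}(x)$ there, hence the two probes are likely to \emph{agree} with each other. Your fix asks for one probe to take the minority value $F[A_0,B]|_x$, but nothing guarantees this happens with any noticeable probability: the plurality definition only lower-bounds the mass on $g_{A_0,B_0}(x)$, and the mass on $F[A_0,B]|_x$ could be essentially zero. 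So the ``one-aligned-with-$g$, one-aligned-with-$F[A_0,B]$'' event you want to count need not occur on more than $h$ coordinates of $E$ with probability exceeding $\gamma$.

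The clean fix — and what the paper does — is to use the bad set $B$ \emph{itself} as one of the two sets in the excellence condition, and a single fresh probe $B'\in\Cons_T(A_0,B_0)$ as the other, with $E$ a random subset of $B\cap B'$ of size $q''n/R$. On each bad coordinate $x$ of $B$ we have $F[A_0,B]|_x\neq g_{A_0,B_0}(x)$ by definition, while for a random $B'\in\Cons_T$ we have (via Claim~\ref{claim:xE_sampler} and the majority definition) that $x\in B'$ and $F[A_0,B']|_x=g_{A_0,B_0}(x)$ with probability $\geq q''/10$, for all but $O(\log(1/\eps)/q'')$ coordinates $x$. Hence $F[A_0,B]$ and $F[A_0,B']$ disagree on such $x$, and an expectation/averaging argument followed by a random choice of $E\subseteq B\cap B'$ gives more than $h$ disagreements on $E$ with probability exceeding $\gamma$, contradicting excellence. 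This completely sidesteps the binary obstacle: you never need any probe to take a minority value, because $B$ already supplies the minority value at every bad coordinate.
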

    Before proving Claim~\ref{claim:localg}, we first note that it implies Lemma~\ref{lemma:localg}.
	\begin{proof}[Proof of Lemma~\ref{lemma:localg}]
    The lemma is immediate by Claims~\ref{claim:good},~\ref{claim:excellent} and~\ref{claim:localg}.
    \end{proof}

    The rest of this section is devoted to the proof Claim~\ref{claim:localg},
    and for that we introduce additional notations and prove some auxiliary claims claims.  Fix an excellent pair $(A_0,B_0)$,
    and let
    \[
    \Cons_T^x = \{B \in \Cons_T(A_0, B_0) \mid x\in B\};
    \]
    we note that this set is used to define $g_{A_0, B_0}(x)$
    in the majority voting step. Also, for $x\in [n]$ we denote by $\mathcal{B}_x$ the collection of all $B\subseteq [n]\setminus A_0$ such that $|B| = (q-q')n$ and $x\in B$
	\begin{claim}
		\label{claim:xE_sampler}
		For at least $1-O\left(\frac{\ln 1/\eps}{q'' n}\right)$ fraction of $x\in [n]\setminus A_0$, we have $2\eps |\mathcal{B}_x|\geq |\Cons_T^x|\geq \frac{\eps}{6} |\mathcal{B}_x|$.
	\end{claim}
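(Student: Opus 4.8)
The statement to prove, Claim~\ref{claim:xE_sampler}, asserts that for almost all coordinates $x \in [n] \setminus A_0$, the fraction of sets $B \in \mathcal{B}_x$ that lie in $\Cons_T^x$ is close to $\eps$ (within a constant factor). The natural approach is to view this as a sampling statement: $\Cons_T(A_0,B_0) \setminus A_0$ is a subset of ${[n]\setminus A_0 \choose q''n}$ of some measure $\rho$, and we want to say that for most $x$, the conditional measure $\Pr_{B \ni x}[B \in \Cons_T(A_0,B_0)]$ is close to $\rho$. This is exactly the kind of statement that the sampling lemma, Lemma~\ref{lemma:sampler}, is designed to deliver, applied to the bipartite inclusion graph $G(n', \ell)$ with $n' = n - |A_0| = (1-q')n$ and $\ell = q''n$.

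First I would set $\rho = \Pr_{B \subseteq [n]\setminus A_0, |B| = q''n}[B \in \Cons_T(A_0,B_0)]$, i.e. the measure of the set $Y' = \Cons_T(A_0,B_0) \cap {[n]\setminus A_0 \choose q''n}$ inside ${[n]\setminus A_0 \choose q''n}$. By the definition of $(\eps/2, T)$-goodness (which holds since $(A_0,B_0)$ is excellent, hence good), we have $\rho \geq \eps/2$. Also trivially $\rho \leq 1$; in fact we may assume $\rho < 1/2$ (if $\rho \geq 1/2$ the statement is even easier, or one notes $\eps$ is small so $2\eps < 1$ and a direct averaging handles it — but cleanly, since we only ever use $\eps$ small, $\rho \ge \eps/2$ suffices and if $\rho$ were very large the bounds $2\eps|\mathcal{B}_x| \ge |\Cons_T^x|$ might fail, so I would first observe $\rho \le 2\eps$: wait — that is not automatic). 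Let me reconsider: the upper bound $|\Cons_T^x| \le 2\eps|\mathcal{B}_x|$ requires $\rho \lesssim \eps$. Hmm, but $\Cons_T(A_0,B_0)$ could in principle be large. Actually no — re-examining, goodness only gives a lower bound $\rho \ge \eps/2$; there's no a priori upper bound. So the cleanest route: apply Lemma~\ref{lemma:sampler} with $\nu$ a fixed small constant (say $\nu = 1/3$) to the set $Y'$ of measure $\rho$; this gives that for all but $O\left(\frac{\log(1/\rho)}{q''n}\right)$ fraction of $x \in [n]\setminus A_0$, $\left|\Pr_{B \in N(x)}[B \in Y'] - \rho\right| \le \nu \rho = \rho/3$, so $\frac{2\rho}{3} \le \Pr_{B \ni x}[B \in \Cons_T^x] \le \frac{4\rho}{3}$. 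Since $\Pr_{B \ni x}[B \in \Cons_T^x] = |\Cons_T^x| / |\mathcal{B}_x|$, this reads $\frac{2\rho}{3}|\mathcal{B}_x| \le |\Cons_T^x| \le \frac{4\rho}{3}|\mathcal{B}_x|$. Using $\rho \ge \eps/2$ gives the lower bound $|\Cons_T^x| \ge \frac{\eps}{3}|\mathcal{B}_x| \ge \frac{\eps}{6}|\mathcal{B}_x|$ as required.

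The remaining point — getting $|\Cons_T^x| \le 2\eps|\mathcal{B}_x|$ — forces $\rho \le 2\eps$ up to constants, which I expect is meant to follow from $\rho$ being essentially $\Theta(\eps)$. Here the upper bound on $\rho$ should come from the fact that goodness is about the {\em exact} consistency probability being $\geq \eps/2$, combined with a matching upper bound one extracts from the structure of the problem, OR (more likely, reading the paper's conventions) the statement tacitly assumes $\rho \le 2\eps$ holds because otherwise one could have replaced $\eps$ by a larger value throughout. The cleanest self-contained fix is: if $\rho > 2\eps$, the conclusion with the constant $2\eps$ literally could fail, so I would instead observe that we are always free to work with $\rho$ in place of $\eps$ as the ``true'' soundness parameter — i.e. re-derive with the sampler giving $\frac{2\rho}{3} \le |\Cons_T^x|/|\mathcal{B}_x| \le \frac{4\rho}{3}$, and note that since $\Cons_T(A_0,B_0)$ being consistent with $(A_0,B_0)$ forces these sets to agree with $F[A_0,B_0]$ on $A_0$ up to $T$ errors, a counting/averaging bound shows $\rho = O(\eps)$ when the overall test passes with probability $\eps$. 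Converting $\log(1/\rho)$ to $\log(1/\eps)$ in the error term uses $\rho \ge \eps/2$ again, so $\log(1/\rho) \le \log(2/\eps) = O(\log(1/\eps))$, yielding the claimed $O\left(\frac{\log(1/\eps)}{q''n}\right)$ error fraction (and $q'' = q - q' \geq q/10$ is a constant, so this is the stated bound up to renaming). The main obstacle, then, is not the sampling argument itself — which is a direct citation of Lemma~\ref{lemma:sampler} — but pinning down the upper bound $\rho = O(\eps)$ cleanly; I expect the paper intends for this to be immediate from the setup (e.g. from an earlier normalization that the test's acceptance probability is exactly $\eps$, not merely at least $\eps$), and I would state it as such.
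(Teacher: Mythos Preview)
Your approach is the same as the paper's: the proof is a direct application of Lemma~\ref{lemma:sampler} to the inclusion graph $G((1-q')n, q''n)$ with $Y' = \Cons_T(A_0,B_0)$, which has measure $\rho \geq \eps/2$ by goodness. Your derivation of the lower bound $|\Cons_T^x| \geq \frac{\eps}{6}|\mathcal{B}_x|$ and of the error fraction $O(\log(1/\eps)/(q''n))$ is exactly right.

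Your worry about the upper bound $|\Cons_T^x| \leq 2\eps|\mathcal{B}_x|$ is well-founded: the sampler lemma yields $|\Cons_T^x|/|\mathcal{B}_x| \in [(1-\nu)\rho, (1+\nu)\rho]$, and there is no a priori reason that $\rho \leq O(\eps)$ for a fixed excellent pair $(A_0,B_0)$ (goodness only gives the lower bound $\rho \geq \eps/2$). This is a genuine imprecision in the claim as stated, and the paper's one-line proof does not address it either. The resolution is not the normalization you guessed, but rather that the only place the claim is used --- Claim~\ref{claim:localg_c} --- needs just the sampler conclusion $|\Cons_T^x| = \Theta(\rho)|\mathcal{B}_x|$, since the quantity $\Pr_{B'\in \Cons_T}[x\in B'] = |\Cons_T^x|/|\Cons_T|$ equals $\Theta(1)\cdot |\mathcal{B}_x|/|\mathcal{B}| = \Theta(q'')$ with the $\rho$'s canceling. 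So the upper bound in terms of $\eps$ is neither provable from the hypotheses nor needed downstream; you should simply state the sampler conclusion and move on.
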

	\begin{proof}
		This claim follows from the sampler property of the inclusion graph by invoking Lemma~\ref{lemma:sampler} for the graphs $G((1-q')n, q''n)$.
	\end{proof}

	We are now ready to prove Claim~\ref{claim:localg}.
	
	\begin{proof}[Proof of Claim~\ref{claim:localg}]
    Fix $A_0,B_0$ as in the claim and assume towards contradiction the statement is false; then as $\Cons_T(A_0,B_0)$ consists of at least $\eps/2$ fraction of $B\subseteq [n]\setminus A_0$ of size $(q-q')n$, we get that
	\begin{equation}
		\label{eq:localg_contradiction}
		\Pr_{B\subseteq [n]\setminus A_0}{[B\in \Cons_T(A_0, B_0)\ \land \ F[A_0, B]|_{B} \overset{>\zeta^{2}}{\neq} g_{A_0,B_0}(B)]} > \frac{\nu\eps}{2}.
	\end{equation}
     Denote the above event by $Z$, and sample $B$ conditioned on $Z$. Let $X_B\subseteq B$ be the set of coordinates $x\in B$ for which $F[A_0, B]|_{x} \neq g_{A_0,B_0}(x)$.  We have the following claim:
	
	\begin{claim}\label{claim:localg_c}
		For all but $O\left(\frac{\ln 1/\eps}{q''}\right)$ many $x\in [n]\setminus A_0$, we have
\[\Pr_{B'\in \Cons_T(A_0, B_0)}[ x\in B' \ \land \ g_{A_0, B_0}(x) = F[A_0, B']|_{x}]\geq \frac{q''}{10}.
\]
	\end{claim}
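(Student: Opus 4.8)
The plan is to prove Claim~\ref{claim:localg_c} from goodness of $(A_0,B_0)$ alone, via the sampler property of the inclusion graph (Lemma~\ref{lemma:sampler}) together with the definition of $g_{A_0,B_0}$ by plurality vote. Write $\mathcal{B}$ for the collection of all $B\subseteq[n]\setminus A_0$ with $|B|=q''n$, and set $\rho := |\Cons_T(A_0,B_0)|/|\mathcal{B}|$; since $(A_0,B_0)$ is good we have $\rho\geq\eps/2$. Recall that $\mathcal{B}_x$ is the subcollection of those $B$ containing $x$, and $\Cons_T^x=\Cons_T(A_0,B_0)\cap\mathcal{B}_x$, so that $|\mathcal{B}_x|/|\mathcal{B}| = q''n/((1-q')n) = q''/(1-q')$.

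First I would apply Lemma~\ref{lemma:sampler} to the bipartite inclusion graph $G((1-q')n,\,q''n)$ between $[n]\setminus A_0$ and $\mathcal{B}$, taking $Y'=\Cons_T(A_0,B_0)$ of measure $\rho$ and the constant $\nu=\eps^3$ from~\eqref{eq:hier_dp}. This yields that for all but an $O_\nu(\log(1/\rho)/(q''n)) = O(\log(1/\eps)/(q''n))$ fraction of $x\in[n]\setminus A_0$ --- equivalently, for all but $O(\ln(1/\eps)/q'')$ many $x$ --- we have $|\Cons_T^x| = |\mathcal{B}_x|\cdot\Pr_{B\in\mathcal{B}_x}[B\in\Cons_T(A_0,B_0)]\geq (1-\nu)\,\rho\,|\mathcal{B}_x|$. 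In particular $\Cons_T^x\neq\emptyset$ for such $x$, so $g_{A_0,B_0}(x)$ is the plurality value among $\{F[A_0,B]|_x : B\in\Cons_T^x\}$; as the alphabet is binary, at least half of the $B\in\Cons_T^x$ satisfy $F[A_0,B]|_x=g_{A_0,B_0}(x)$.

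Then I would combine the two estimates: for such a non-exceptional $x$,
\[
\Pr_{B'\in\Cons_T(A_0,B_0)}\big[\,x\in B'\ \wedge\ g_{A_0,B_0}(x)=F[A_0,B']|_x\,\big]
=\frac{\big|\{B\in\Cons_T^x : F[A_0,B]|_x=g_{A_0,B_0}(x)\}\big|}{|\Cons_T(A_0,B_0)|}
\geq\frac{|\Cons_T^x|/2}{\rho\,|\mathcal{B}|},
\]
and plugging in $|\Cons_T^x|\geq(1-\nu)\rho\,|\mathcal{B}_x|$ the factor $\rho$ cancels, leaving $\frac{1-\nu}{2}\cdot\frac{|\mathcal{B}_x|}{|\mathcal{B}|}=\frac{1-\nu}{2}\cdot\frac{q''}{1-q'}\geq\frac{(1-\nu)q''}{2}\geq\frac{q''}{10}$, since $\nu$ is tiny. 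This is exactly the claimed inequality.

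I do not expect a real obstacle here; the only point that needs a moment's attention is precisely the cancellation of $\rho$: one must use the two-sided sampler estimate $|\Cons_T^x|\approx\rho\,|\mathcal{B}_x|$ rather than the crude bound $|\Cons_T(A_0,B_0)|\leq|\mathcal{B}|$, since the latter would only give $\Theta(\eps q'')$ in place of $\Theta(q'')$. The count of exceptional $x$ is routine --- it is $(1-q')n$ times the exceptional fraction $O(\log(1/\eps)/(q''n))$ supplied by Lemma~\ref{lemma:sampler}. (For a general alphabet $[m]$ the plurality vote contributes $1/m$ in place of $1/2$, weakening $q''/10$ to $q''/(5m)$; since the present argument fixes $m=2$ this is immaterial, and the analogous constants are tracked in Section~\ref{sec:general_dp}.)
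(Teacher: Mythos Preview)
Your argument is correct and is exactly the paper's approach: the paper's proof reads ``Immediate by Claim~\ref{claim:xE_sampler}'', and that claim is itself just Lemma~\ref{lemma:sampler} applied to the inclusion graph $G((1-q')n,q''n)$, followed by the majority-vote definition of $g_{A_0,B_0}$. Your write-up is in fact more explicit than the paper about the key cancellation of $\rho$.

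One small slip worth fixing: you identify the accuracy parameter $\nu$ in Lemma~\ref{lemma:sampler} with the unrelated parameter $\nu=\eps^3$ from the hierarchy~\eqref{eq:hier_dp}. The sampler's exceptional fraction is $O_\nu(\cdot)$, so with $\nu=\eps^3$ the hidden constant depends on $\eps$ and the exceptional count would no longer be $O(\ln(1/\eps)/q'')$. Just take the sampler's $\nu$ to be an absolute constant (e.g.\ $\nu=1/2$); this is all you need for $(1-\nu)q''/2\geq q''/10$, and it keeps the exceptional count as claimed.
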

\begin{proof}
  Immediate by Claim~\ref{claim:xE_sampler}.
\end{proof}
	
	Let $X'_B\subseteq X_B$ be the set of coordinates for the inequality in Claim~\ref{claim:localg_c} holds. Note that given the event $Z$,
$|X'_B|\geq \card{X_B} - O\left(\frac{\ln 1/\eps}{q''}\right) \geq W^{2}/2$, so we may pick a subset of $X'_B$ of size $W^2/2$; without loss of generality
we assume $X'_B$ is already of that size. Picking $B'\in \Cons_T(A_0, B_0)$ randomly, by Claim~\ref{claim:localg_c} the expected size of $Y_{B'}:=\{ x \in B'\mid x\in X'_B\ \&\  g_{A_0, B_0}(x) = F[A_0, B']|_{x}\}$ is at least $\frac{q''}{10}W^{2}$. Therefore , by an averaging argument, as $\card{Y_{B'}}\leq \frac{W^2}{2}$ always,
with probability at least $\frac{q''}{20}$ we have $|Y_{B'}|\geq \frac{q''}{20}W^{2}$.
	
	Also, picking $B'\in \Cons_T(A_0, B_0)$ at random, we have by Chernoff's bound that $|B\cap B'| \geq \frac{q''^2 n}{10^{10}}$ with probability at least $1-\exp(-\Omega(q''^2n))$. Therefore, by union bound, with probability at least $\frac{q''}{30}$, we have
\begin{enumerate}
  \item $\card{\{ x \in B'\mid x\in X'_B\ \&\ g_{A_0, B_0}(x) = F[A_0, B']|_{x}\}}|\geq \frac{q''}{100}W^{2}$.
  \item $|B\cap B'| \geq \frac{q''^2 n}{10^{10}}$.
\end{enumerate}
 Taking a random $E\subseteq B\cap B'$ of size $q''n/\rr$, with probability at least $1/2$, we get that $|E\cap Y_{B'}| \geq \frac{q''}{200\rr}W^{2}$. Removing the conditioning on $B'\in \Cons_T(A_0, B_0)$ and the above three conditions, we get
	\[
\Pr_{\substack{B, B'\subseteq [n]\setminus A_0\\ E\subseteq B\cap B'}}{[B, B'\in \Cons_T(A_0, B_0)\ \land\ F[A_0, B]|_E \overset{>\frac{q''}{200\rr}W^{2}}{\neq} F[A_0, B']|_E]}\geq \frac{\nu\eps}{2}\cdot\frac{q''}{30}\cdot \frac{\eps}{2}\cdot\frac{1}{2} > \gamma.
\]
	Since $\frac{q''}{200R}W^{2} \geq h$, this is a contradiction to the fact that $(A_0, B_0)$ is $(\eps/2, T, \rr,h, \gamma)$-excellent.
	\end{proof}

	\subsubsection{Global Structure: the Modified Test}
	\label{sec:globals}
	
	Now that we have a  function $g_{A_0, B_0}$ for every excellent pair $(A_0, B_0)$, the last step is to show that these functions are similar to each other and hence there is a global function $g$ that (almost) agrees with at least $\delta(\eps)$ fraction of the entries from the table $F[\cdot]$. We follow the same proof strategy as appeared in~\cite{BKMcsp3}, however, we will use an explicit bound on the small-set expansion property of a certain graph.
	
	For $c\in (0,1)$, consider selecting a random set of size $qn$ as follows. First, sample a subset $A\subseteq [n]$ of size $q'n$ and then select a set $B\subseteq [n]\setminus A$ of size $(q-q')n$ uniformly at random. Select a random subset of $A$ of size $cq'n$ and call it $D$. Let $E=A\setminus D$. Output $(D, E, B)$, where $A= D\cup E$. Consider the modified agreement test (\test) given in Figure~\ref{fig:test2}.
	
	\begin{figure}[!h]
		\label{fig:test2}
		\fbox{
			\parbox{450pt}{
				\vspace{10pt}
				Given $F: {[n]\choose qn} \rightarrow \{0,1\}^{qn}$,
				\begin{itemize}
					\item Pick a random set $(D\cup E\cup B)$ of size $qn$ where $|D| = cq'n$, $|E|= (1-c)q'n$ and $|B| =q''n =  (q-q')n$.
					\item Select a random subset $E'\subseteq [n]\setminus (D\cup E)$ of size $(1-c)q'n$.
					\item Select a random set $ B'\subseteq [n]\setminus (D\cup E')$ u.a.r. where $|B'|=(q-q')n$.
					\item Check if $F[D\cup E\cup B]|_{D} \overset{\leq 2 T}{=} F[D\cup E'\cup B']|_{D}$.
				\end{itemize}
			}
		}
		\caption{Modified agreement test \test}
	\end{figure}
	
	Note that \test\ is similar to the agreement test \maintest that we wish to analyze, except that the we change the parameters from $(q', q, T)$ to $(cq'n, q, 2T)$ for $c\in (0,1)$. Another (minor) difference is that we require the sets $E$ and $E'$ to be disjoint in the above distribution, whereas in the \maintest, the sets $B_0$ and $B_1$ are uncorrelated. As the distribution of $(E',B')$ depends on $(D,E)$, for notational convenience, we denote this marginal distribution by $\mathcal{D}(D,E)$.
	
	We now relate the two tests \maintest and \test\ in order to show the consistency between the functions $g_{A,B}$. Towards this, we define the notion of consistency, goodness, and excellence tailored to \test.
	
	\begin{definition}(consistency)
		Fix a set $(D, E, B)$ where $A=D\cup E$. A subset $(E',B')$ in the support of $\mathcal{D}(D,E)$ is said to be consistent with $(D, E, B)$ if $F[D, E, B]|_{D} \overset{\leq 2T}{=} F[D, E', B']|_{D}$. Also, we let $\newCons_{2T}(D, E, B)$ be the set of all the sets $(E',B')$ that are consistent with  $(D, E, B)$.
	\end{definition}
	
	\begin{definition}(goodness)
		A set $(D,E,B)$ is called $(\eps^2/2, 2T)$-good if
		$$\Pr_{ (E',B')\sim \mathcal{D}(D,E)}{[(E',B')\in \newCons_{2T}(D,E,B)]} \geq \eps^2/2.$$
	\end{definition}
	
	\begin{definition}(excellence)
		A set $(D,E,B)$ is called $(\eps^2/2, 2 T, R, h, \gamma)$-excellent if it is $(\eps^2/2, 2T)$-good and
		\[
\Pr_{\substack{(E_1,B_1), (E_2,B_2)\sim  \mathcal{D}(D,E)\\ E'\subseteq (E_1\cup B_1)\cap (E_2\cup B_2)\\ |E'|= q''n/\rr }}
{\left[ \substack{F[D\cup  E_1\cup  B_1]|_{E'} \overset{\geq h}{\neq} F[D\cup  E_2\cup B_2]|_{E'}\text{ and }\\ (E_i, B_i)\in \newCons_{2T}(D,E,B) \text{ for $i=1,2$ }}\right]} \leq \gamma.
\]
	\end{definition}
	
	The following claim shows that if the \maintest passes with probability at least $\eps$, then the test \test\ passes with probability at least $0.99\eps^2$.
	\begin{claim}
		\label{claim:newgood}
		If $\Pr_{A_0, B_0, B_1}{[F[A_0, B_0]|_{A_0} \overset{\leq T}{=} F[A_0, B_1]|_{A_0}]}\geq \eps$, then there is $c\in \left[ \frac{q'}{2q}, \frac{2q'}{q} \right]$ such that the test \test passes with probability at least $0.99\eps^2$. Consequently, a random triple $(D, E, B)$, with $|D| = cq'n$, is $(\eps^2/2, 2T)$-good with probability at least $0.49\eps^2$.
	\end{claim}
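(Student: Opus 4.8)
The plan is to relate the two tests by a direct coupling argument, showing that the event "$\test$ accepts" occurs with probability roughly $\eps^2$ whenever "$\maintest$ accepts" with probability $\eps$. The key observation is that the pair $(A_0, B_1)$ appearing in $\maintest$, where $A_0$ is a $q'n$-set, is distributed essentially like a random $q$-set whose first $q'n$ coordinates form $A_0$; in particular, if we first sample $(D, E, B)$ as in $\test$ with $|D| = cq'n$ and $|D \cup E| = q'n$, then the sets $(D \cup E, B)$ and $(D \cup E', B')$ are each distributed as a random $(A, B)$-pair in $\maintest$ conditioned on the $A$-part being $D \cup E$ and $D \cup E'$ respectively. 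The subtlety is that $\test$ checks agreement only on $D$, a smaller set of size $cq'n$, and allows $2T$ errors rather than $T$.

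First I would fix the relationship between the parameters: we want $cq'n$ to play the role of the ``small'' set, and we need $c$ chosen so that $c q' \le q'$ trivially but also so that $c$ is bounded away from $0$ and from $1$, which is where the constraint $q' \le \tfrac{9q}{10}$ (hence $c$ can be taken in $[\tfrac{q'}{2q}, \tfrac{2q'}{q}] \subseteq (0, \tfrac{9}{5}) \cap$ something bounded away from $1$) enters. Next I would set up the coupling: sample $A_0 = D \cup E$ of size $q'n$ and then $B_0, B_1$ independently as in $\maintest$; restrict attention to the sub-event that $F[A_0, B_0]|_{A_0} \overset{\le T}{=} F[A_0, B_1]|_{A_0}$, which by assumption has probability $\ge \eps$ conditionally on a $\ge\eps$ fraction of $A_0$'s. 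Now observe that if we independently resample a second pair $(E', B')$ from $\mathcal{D}(D,E)$ and ask that the corresponding $\maintest$ event also holds for $(D\cup E', B_0', B')$-type triples, then by an application of Fact~\ref{fact:holder_trick} (with $X$ the pair $(E, B)$ and $Y$ the pair $(E', B')$, or rather with the shared $A_0$ playing the conditioning role), the probability that both the $(E,B)$-consistency with a fixed reference and the $(E',B')$-consistency hold simultaneously is at least $\eps^2$ up to constants. When both hold, the strings $F[A_0, B]|_{A_0}$ and $F[A_0, B']|_{A_0}$ each differ from a common reference string on at most $T$ coordinates, hence differ from each other on at most $2T$ coordinates, and in particular on the subset $D \subseteq A_0$ they differ on at most $2T$ coordinates — which is exactly the $\test$ acceptance condition.

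The main obstacle I expect is bookkeeping the precise averaging: $\maintest$ conditions on a fixed $A_0$ and two independent $B$'s, whereas $\test$ has the correlated pair $(E,B)$ and $(E',B')$ with $E, E'$ disjoint and of size $(1-c)q'n$ each, plus $D$ shared. I would handle this by noting that the distribution of $(A_0, B_0)$ in $\maintest$, marginally, is uniform over $q$-sets; and that conditioned on $A_0 = D \cup E$ the distribution of $B_0$ is uniform over $q''n$-subsets of $[n] \setminus A_0$, which matches the $\test$ distribution of $B$ given $(D,E)$. For the second pair, the fact that $E'$ must avoid $D \cup E$ in $\test$ (rather than just $D$) causes only a negligible statistical-distance loss since $|E| = (1-c)q'n = o(n)$ relative to the ambient $n$ — more precisely, one checks that resampling $E'$ to be disjoint from $E$ versus merely from $D$ changes the distribution by at most $O(q'n \cdot q'n / n) = O(q'^2 n)$ in total variation, which for the regime $q'$ a fixed constant is actually not $o(1)$, so instead I would simply bake the disjointness into the definition of $\mathcal{D}(D,E)$ from the start and verify the Fact~\ref{fact:holder_trick} application still goes through with the shared coordinate set $D$ as the conditioning variable — this works because the only thing we need is that, conditioned on $D$ and on the reference string $F[D\cup E \cup B]|_D$, the two consistency events are conditionally independent, which holds by construction. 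The final sentence of the claim, that a random triple $(D,E,B)$ is $(\eps^2/2, 2T)$-good with probability $\ge 0.49\eps^2$, then follows from an averaging argument identical to Claim~\ref{claim:good}: since the overall acceptance probability of $\test$ is $\ge 0.99\eps^2$, at least a $0.49\eps^2$ fraction of triples $(D,E,B)$ have conditional acceptance probability $\ge \tfrac{1}{2}\cdot 0.99\eps^2 \ge \eps^2/2$, which is precisely goodness.
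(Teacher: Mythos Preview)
Your proposal has a genuine gap at the heart of the argument: the ``common reference string'' you invoke does not exist in your setup. The \maintest\ hypothesis tells you that $F[A_0,B_0]|_{A_0}$ and $F[A_0,B_1]|_{A_0}$ are close \emph{for the same $A_0$}. When you resample $(E',B')$ and look at the pair $(D\cup E',\,B')$, you now have a different set $A_0' = D\cup E' \ne A_0$, and there is no string on $D$ (or anywhere) that is simultaneously close to both $F[A_0,B]|_D$ and $F[A_0',B']|_D$. Your later remark that ``conditioned on $D$ and on the reference string $F[D\cup E\cup B]|_D$, the two consistency events are conditionally independent'' is precisely where this breaks: that reference string depends on $(E,B)$, so conditioning on it destroys the independence you need, and without conditioning on it there is no event of the form $\phi(D,(E,B))\cdot\phi(D,(E',B'))$ to which Fact~\ref{fact:holder_trick} or Cauchy--Schwarz applies.

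The paper's fix is to introduce an intermediate $qn$-set $S$ as the bridge: sample $S$, then two independent $q'n$-subsets $A,A'\subseteq S$, and independent $B,B'$ outside $A,A'$ respectively. Now $(A,\,S\setminus A)$ and $(A,B)$ form a genuine \maintest\ pair, so the event $F[S]|_A \overset{\le T}{=} F[A,B]|_A$ has probability $\ge\eps$; conditionally on $S$, the pairs $(A,B)$ and $(A',B')$ are independent, so Jensen gives probability $\ge\eps^2$ that both events hold. On that joint event, $F[S]|_{A\cap A'}$ is the common reference, yielding $F[A,B]|_{A\cap A'}\overset{\le 2T}{=}F[A',B']|_{A\cap A'}$. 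Setting $D=A\cap A'$, $E=A\setminus D$, $E'=A'\setminus D$ recovers exactly the \test\ distribution, and the range $c\in[q'/(2q),\,2q'/q]$ arises because $|A\cap A'|$ concentrates around $(q'/q)\cdot q'n$ by Chernoff. Your discussion of why $c$ lands in that interval is also off---it has nothing to do with the constraint $q'\le \tfrac{9q}{10}$ directly, but with the intersection size of two random subsets of $S$. The final averaging step you describe is correct.
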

	\begin{proof}
		Consider the following distribution.
		\begin{itemize}
			\item Select $S\subseteq [n]$ of size $qn$ u.a.r.
			\item Select $A, A'\subseteq S$ each of size $q'n$ u.a.r.
			\item Select $B\subseteq [n]\setminus A$ of size $(q-q')n$ u.a.r.
			\item Select $B'\subseteq [n]\setminus A'$ of size $(q-q')n$ u.a.r.
		\end{itemize}
		We observe the following properties of the above distribution.
		\begin{enumerate}
			\item The pairs $(A, S\setminus A)$ and $(A, B)$  are distributed according to the test distribution \maintest. The same holds for the pairs $(A', S\setminus A')$ and $(A', B')$
			\item For a fixed $S$, the pairs $(A, B)$ and $(A', B')$ are independent.
		\end{enumerate}
		Note that
		\begin{align*}
			&\Expect{S, (A, B), (A',B')}{\mathbf{1}_{ F[S]|_{A}  \overset{\leq T}{\neq} F[A,B]|_A }\ \cdot \ \mathbf{1}_{ F[S]|_{A'}  \overset{\leq T}{\neq} F[A',B']|_{A'}}}\\ &= 		\Expect{S}{\Expect{(A, B), (A',B')}{ \mathbf{1}_{ F[S]|_{A}  \overset{\leq T}{\neq} F[A,B]|_A }\ \cdot\ \mathbf{1}_{ F[S]|_{A'}  \overset{\leq T}{\neq} F[A',B']|_{A'}}}}\\
			&= 		\Expect{S}{\Expect{(A, B)}{\mathbf{1}_{ F[S]|_{A}  \overset{\leq T}{\neq} F[A,B]|_A} }^2} \tag*{(Property 2. above)}\\
			&\geq 	\left(	\Expect{S}{\Expect{(A, B)}{\mathbf{1}_{ F[S]|_{A}  \overset{\leq T}{\neq} F[A,B]|_A} }}\right)^2\tag*{(Jensen's inequality)}\\
			& = \eps^2.
		\end{align*}
		Note that the events $F[S]|_{A} \overset{\leq T}{\neq} F[A,B]|_A$ and $F[S]|_{A'}  \overset{\leq T}{\neq} F[A',B']|_{A'}$ together imply that $F[A,B]|_{A\cap A'}  \overset{\leq 2T}{=} F[A',B']|_{A\cap A'}$. Therefore, we have
		\[
        \Pr_{(A, B), (A', B') }[F[A,B]|_{A\cap A'} \overset{\leq 2T}{\neq} F[A',B']|_{A\cap A'}]\geq \eps^2.
        \]
		
		Based on how the sets $A$ and $A'$ are distributed, we have with $1-\exp(-n)$ probability, the size of $A\cap A'$  lies in $\left[ \frac{q'^2n}{2q}, \frac{2q'^2n}{q} \right]$. Thus, there exist $c\in \left[ \frac{q'}{2q}, \frac{2q'}{q} \right]$ such that
		\[
        \Pr_{(A, B), (A', B') | |A\cap A'| = cq'n}[F[A,B]|_{A\cap A'} \overset{\leq 2T}{=} F[A',B']|_{A\cap A'}]\geq \eps^2-\exp(-n)\geq 0.99\eps^2.
        \]
		
		Now, if we let $D = A\cap A'$, $E = A\setminus D$ and $E' = A'\setminus D$, then the pairs $(D, E, B)$ and $(D, E', B')$ are distributed according to the distribution in \test. Hence, the acceptance probability of \test\ is at least $0.99\eps^2$. The claim now follows from the averaging argument similar to the one in the proof of Claim~\ref{claim:good}.
	\end{proof}
	
	The following claim shows that a random good set is excellent with high probability.
	\begin{claim}
		\label{claim:excellentnew}
		A random $(\eps^2/2, 2 T)$-good set $(D, E, B)$ is $(\eps^2/2, 2 T,\rr, h, \gamma)$-excellent with probability at least $1-\frac{2^{-\Omega(h)}}{\gamma}$.
	\end{claim}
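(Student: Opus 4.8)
\textbf{Proof plan for Claim~\ref{claim:excellentnew}.} The plan is to mirror the proof of Claim~\ref{claim:excellent} from Section~\ref{sec:dp_lst}, adapting it to the distribution associated with \test. The key idea is the same conditional-probability trick: introduce an auxiliary ``bad'' event $Z_1$ (a good set which fails the excellence condition), a second event $Z_2$ which is implied by $Z_1$ with probability at least $\gamma$, and then bound $\Pr[Z_1] \le \Pr[Z_2]/\Pr[Z_2\mid Z_1] \le \Pr[Z_2]/\gamma$. The remaining work is to show $\Pr[Z_2] \le 2^{-\Omega(h)}$.

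First I would set up the events precisely. Let $Z_1$ be the event that $(D,E,B)$ is $(\eps^2/2,2T)$-good but
\[
\Pr_{\substack{(E_1,B_1),(E_2,B_2)\sim \mathcal{D}(D,E)\\ E'\subseteq (E_1\cup B_1)\cap(E_2\cup B_2),\ |E'|=q''n/\rr}}{\left[F[D\cup E_1\cup B_1]|_{E'}\overset{\geq h}{\neq} F[D\cup E_2\cup B_2]|_{E'}\ \land\ (E_i,B_i)\in \newCons_{2T}(D,E,B)\right]}>\gamma.
\]
Let $Z_2$ be the event that $(D,E,B)$ is good, both $(E_i,B_i)\in\newCons_{2T}(D,E,B)$, and $F[D\cup E_1\cup B_1]|_{D\cup E'}\overset{>h/2}{\neq}F[D\cup E_2\cup B_2]|_{D\cup E'}$ (agreement measured on the larger set $D\cup E'$ rather than just $E'$). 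Since disagreement on $E'$ of size $\ge h$ forces disagreement on $D\cup E'$ of size $\ge h$, we get $Z_1\Rightarrow Z_2$ after the resampling, hence $\Pr[Z_2\mid Z_1]\ge \gamma$.

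Next I would bound $\Pr[Z_2]$ by re-sampling the sets in the favorable order, exactly as in Claim~\ref{claim:excellent}. One first samples $A' = D\cup E'$ as a random set of the appropriate size, then $B_1,B_2$ disjoint from $A'$ conditioned on $F[A',B_1]|_{A'}\overset{>h/2}{\neq}F[A',B_2]|_{A'}$, letting $A''\subseteq A'$ be the set of coordinates of disagreement, so $|A''|\ge h/2$. Then one picks $D\subseteq A'$ of the prescribed size $cq'n$ (with $E'=A'\setminus D$). The consistency conditions $(E_i,B_i)\in\newCons_{2T}(D,E,B)$ force $F[D\cup E_i\cup B_i]|_{D}\overset{\le 2T}{=}F[D\cup E\cup B]|_D$ for $i=1,2$, hence $|A''\cap D|\le 4T$. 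But $D$ is a random $cq'n$-subset of $A'$ and $c=\Theta(q'/q)$, so the expected size of $|A''\cap D|$ is $\ge c|A''| \gg h\cdot \Omega(q'/q) \gg T$ (using $h = \Theta(T\log(1/\eps)/q')$ and the hierarchy~\eqref{eq:hier_dp2}); by Chernoff, $\Pr[|A''\cap D|\le 4T]\le 2^{-\Omega(h)}$. The only point that needs mild care is that the distribution of $D$ as a subset of $A'$ (coming from the \test\ sampling) is close enough to uniform to apply Chernoff — this follows from the explicit ranges $c\in[q'/2q,\,2q'/q]$ established in Claim~\ref{claim:newgood}, just as in Claim~\ref{claim:excellent}. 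Combining, $\Pr[Z_1]\le 2^{-\Omega(h)}/\gamma$, and since a good set is excellent unless $Z_1$ holds, a random good set is $(\eps^2/2,2T,\rr,h,\gamma)$-excellent with probability at least $1-2^{-\Omega(h)}/\gamma$, as claimed. I expect the main (and only mild) obstacle to be verifying that the correlated sampling of $D$ inside $A'$ behaves sufficiently like a uniform subset so that the Chernoff step goes through; this is handled exactly as in the proof of Claim~\ref{claim:excellent} using the controlled value of $c$.
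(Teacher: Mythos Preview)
Your proposal is correct and follows essentially the same approach as the paper: defining the events $Z_1,Z_2$, using $\Pr[Z_1]\le \Pr[Z_2]/\gamma$, and bounding $\Pr[Z_2]$ by resampling $D'=D\cup E'$ first, then choosing $D$ uniformly inside $D'$ and applying Chernoff to $|D\cap A''|$. One minor imprecision: the expected value of $|D\cap A''|$ is not $c|A''|$ but rather $\tfrac{cq'n}{cq'n+q''n/\rr}\cdot|A''|$, which (since $\rr$ is large in the hierarchy~\eqref{eq:hier_dp2}) is actually close to $|A''|\ge h/2$, so the Chernoff step goes through even more cleanly than you indicate.
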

	\begin{proof}
		The is along the same lines as the proof of Claim~\ref{claim:excellent}, and we prove it here again for completeness. Consider the following two events.
		\begin{enumerate}
			\item Event $Z_1$: $(D_0, E_0, B_0)$ is good but
			\[
			\Pr_{\substack{(E_1,B_1), (E_2,B_2)\sim  \mathcal{D}(D,E)\\ E'\subseteq (E_1\cup B_1)\cap (E_2\cup B_2)\\ |E'|= q''n/\rr }}{\left[ \substack{(E_i, B_i)\in \newCons_{2T}(D_0, E_0, B_0) \text{ for $i=1,2$ } \\\land F[D_0\cup  E_1\cup  B_1]|_{E'} \overset{\geq h}{\neq} F[D_0\cup  E_2\cup B_2]|_{E'}}\right]} > \gamma.
			\]
			
			\item Event $Z_2$: $(D_0, E_0, B_0)$ is good, $(E_i, B_i)\in \newCons_{2T}(D_0, E_0, B_0)$ for $i=1,2$ such that $|(E_1\cup B_1)\cap (E_2\cup B_2)|\geq q''n/R$, $E'\subseteq (E_1\cup B_1)\cap (E_2\cup B_2)$ of size $q''n/R$  and
			\[
			F[D_0, E_1, B_1]|_{D_0\cup E'} \overset{> h/2}{\neq} F[D_0, E_2, B_2]|_{D_0\cup E'}.
			\]
		\end{enumerate}
		We wish to upper bound $\Pr[Z_1]$, and towards this end we write:
		\[
		\Pr[Z_1] = \Pr[Z_1\land Z_2]/\Pr[Z_2 \mid Z_1]\leq  \Pr[Z_2]/\Pr[Z_2 \mid Z_1],
		\]
		and we give an upper bound on the numerator as well as a lower bound on the denominator.
		Note that $\Pr[Z_2\mid Z_1]\geq \gamma$, and we now upper bound $\Pr[Z_2]$. The sets from the event $Z_2$ can be equivalently sampled as follows. First, sample a random subset $D'$ of $[n]$ of size $cq'n+\frac{q''n}{\rr}$ (which is to be thought of as $D_0\cup E'$), and then pick random sets $H_1, H_2\subseteq [n]\setminus D'$ of size $qn - cq'n-\frac{q''n}{\rr}$, conditioned on the event $F[D'\cup H_1]|_{D'} \overset{> h/2}{\neq} F[D'\cup H_2]|_{D'}$ and the event that the distribution of $(D'\cup H_1), (D'\cup H_2)$ is consistent with the distribution of $(D_0\cup E_1\cup B_1), (D_0\cup E_2\cup B_2)$ from event $Z_2$ . Let $D''\subseteq D'$ be the set of coordinates $x\in D'$ where $F[D',H_1]|_x\neq F[D',H_2]|_x$,
		so that $\card{D''}\geq h/2$. Take $D_0$ to be a random subset of $D'$ of size $cq'n$ and set $E' = D'\setminus D_0$. If $(E_i, B_i)\in \newCons_{2T}(D_0, E_0, B_0)$ for $i=1,2$,
		we get that $F[D_0, E_1, B_1]|_{D_0} \overset{\leq  4T}{\neq} F[D_0, E_2, B_2]|_{D_0}$ and hence $\card{D''\cap D_0}\leq  4T$. On the other hand, the
		expected size of $|D''\cap D_0|$ is at least $h/4$, and so by Chernoff's inequality the probability that $|D''\cap D_0|\leq  4T$ is at most $2^{-\Omega(h)}$.
	\end{proof}
	
	Similar to the previous analysis, for an $(\eps^2/2, 2 T, \rr,  h, \gamma)$-excellent  triple $(D_0, E_0, B_0)$, we define a function $g_{D_0,E_0,B_0} : [n]\rightarrow \{0,1\}$ based on the majority vote of the table $F$ restricted to the sets in $\newCons_{2T}(D_0,E_0,B_0)$. More formally, for $x\in [n]\setminus D_0$, we set
	\[
        g_{D_0,E_0,B_0}(x) := \mathop{\mathsf{Majority}}_{\substack{(E, B)\in  \newCons_{2T}(D_0,E_0,B_0) \mid\\ E\cup B\ni x}} F[D_0,E,B]|_{x}.
        \]
	If there is no such $E\cup B$ that contains $x$ then we set $ g_{D_0,E_0,B_0}(x) := 0$. We also set $  g_{D_0,E_0,B_0}(x) = F[D_0, E_0, B_0]|_{x}$ for $x\in D_0$.
	The following claim is analogous to Claim~\ref{claim:localg}, saying that the local function $g_{D_0,E_0,B_0}$ strongly agrees with the table $F$ on $\newCons_{2T}(D_0,E_0,B_0)$.
	\begin{claim}
		\label{claim:localgnew}
		If $(D_0, E_0, B_0)$ is $(\eps^2/2, 2 T, \rr, h, \gamma)$-excellent then
		\[
        \Pr_{(E,B)\in \newCons_{2T}(D_0,E_0,B_0)}{[F[D_0,E, B]|_{E,B} \overset{>W^{2}}{\neq} g_{D_0,E_0,B_0}(E,B)]} \leq \nu.
        \]
	\end{claim}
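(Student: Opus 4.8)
The plan is to mirror the proof of Claim~\ref{claim:localg} almost verbatim, with the roles of the parameters $(q',q,T)$ replaced by $(cq'n, qn, 2T)$ and the distributions replaced by the \test-versions. First I would note that by Claim~\ref{claim:newgood} and Claim~\ref{claim:excellentnew}, a random excellent triple exists with probability at least $0.49\eps^2 - 2^{-\Omega(h)}/\gamma \gg \eps^2$, and that $\newCons_{2T}(D_0,E_0,B_0)$ has density at least $\eps^2/2$ in the support of $\mathcal{D}(D,E)$; these replace the analogous ``goodness'' statements used in Section~\ref{sec:dp_lst}. So the task is purely the analogue of Claim~\ref{claim:localg}: showing that the plurality function $g_{D_0,E_0,B_0}$ agrees with $F[D_0,E,B]$ on all but $W^2$ coordinates for all but a $\nu$-fraction of $(E,B)\in\newCons_{2T}(D_0,E_0,B_0)$.

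The argument proceeds by contradiction exactly as in the proof of Claim~\ref{claim:localg}. Assume that for more than a $\nu$-fraction of $(E,B)\in\newCons_{2T}(D_0,E_0,B_0)$ the table $F[D_0,E,B]$ disagrees with $g_{D_0,E_0,B_0}$ on more than $W^2$ coordinates. The first ingredient is a sampling statement playing the role of Claim~\ref{claim:xE_sampler}: for all but an $O(\log(1/\eps)/(q''n))$-fraction of coordinates $x\in[n]\setminus D_0$, the fraction of $(E,B)\in\newCons_{2T}(D_0,E_0,B_0)$ with $x\in E\cup B$ is comparable (up to constant factors) to the fraction of all $(E,B)$ in the support of $\mathcal{D}(D,E)$ with $x\in E\cup B$; this follows from Lemma~\ref{lemma:sampler} applied to the appropriate inclusion graph, noting that once $D_0$ is fixed the coordinate $x$ lands in $E\cup B$ with probability bounded away from $0$ and $1$. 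Given this, for a typical $x$ in the ``disagreement set'' $X_B$ of a bad $(E,B)$, a random $(E',B')\in\newCons_{2T}(D_0,E_0,B_0)$ contains $x$ and has $F[D_0,E',B']|_x=g_{D_0,E_0,B_0}(x)$ with probability $\Omega(q'')$ (this is Claim~\ref{claim:localg_c}'s analogue, using that $g$ is defined by plurality vote). Then, fixing a bad $(E,B)$ and a size-$W^2/2$ subset $X_B'$ of $X_B$ on which this holds, the expected number of such $x$ that land in a random $(E',B')$ is $\Omega(q'' W^2)$, so by averaging a constant fraction of the $(E',B')$ contain $\Omega(q'' W^2)$ of them; meanwhile Chernoff gives $|(E\cup B)\cap(E'\cup B')|\geq \Omega(q''^2 n)$ with overwhelming probability, so on a random $E'$-subset of size $q''n/R$ of this intersection we retain $\Omega(q'' W^2/R)$ of the marked coordinates where $F[D_0,E,B]$ and $F[D_0,E',B']$ disagree. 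Removing all conditioning, this produces a lower bound of order $\nu\eps^2\cdot q''\cdot\eps^2$ on the probability of the event in the definition of excellence with disagreement parameter $\Omega(q'' W^2/R)\geq h$, contradicting $\gamma$-excellence since $\nu\eps^4 q'' \gg \gamma$ by the hierarchy~\eqref{eq:hier_dp},~\eqref{eq:hier_dp2}.

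The main obstacle, as in the original proof, is bookkeeping the correlated sampling so that the final probability genuinely exceeds $\gamma$: one must be careful that the distribution of the pair $(D_0\cup E\cup B,\ D_0\cup E'\cup B')$ obtained in the contradiction argument — after choosing $(E,B)$, then $(E',B')\in\newCons_{2T}$, then an $E'$-subset of the intersection — matches (up to the constant-factor losses one can afford) the distribution appearing in the definition of $(\eps^2/2,2T,R,h,\gamma)$-excellence of $(D_0,E_0,B_0)$, which involves two independent draws from $\mathcal{D}(D,E)$ and a random subset of the intersection. This requires observing that $\newCons_{2T}$ is a noticeable-density subset of the support of $\mathcal{D}(D,E)$ and that restricting both coordinates to this set only costs $\mathrm{poly}(\eps)$ factors, which are absorbed into the hierarchy. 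Once this is checked, the quantitative estimates are routine Chernoff and averaging arguments identical to those in Section~\ref{sec:dp_lst}, and I would simply state ``the proof is identical to that of Claim~\ref{claim:localg} with $\newCons$ in place of $\Cons$ and $\mathcal{D}(D,E)$ in place of the uniform distribution over $B'\subseteq[n]\setminus A$'' rather than reproduce every line.
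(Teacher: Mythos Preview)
Your proposal is correct and follows essentially the same approach as the paper's proof. The one point worth flagging is that you claim the analogue of Claim~\ref{claim:xE_sampler} ``follows from Lemma~\ref{lemma:sampler} applied to the appropriate inclusion graph,'' but the paper explicitly notes that Lemma~\ref{lemma:sampler} \emph{cannot} be applied directly here: under $\mathcal{D}(D_0,E_0)$ the set $E'$ is required to be disjoint from $E_0$ while $B'$ is not, so the marginal probability that $x\in E\cup B$ differs depending on whether $x\in E_0$ or $x\in[n]\setminus(D_0\cup E_0)$, and the distribution of $(E,B)$ is not uniform over $(q-cq')n$-subsets of $[n]\setminus D_0$. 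The paper handles this by treating the two cases $x\in E_0$ and $x\in[n]\setminus(D_0\cup E_0)$ separately and reproving the sampling property by hand in each case (the argument is the same Chernoff-based one underlying Lemma~\ref{lemma:sampler}, just not literally an invocation of it). This is a bookkeeping detail rather than a genuine obstacle, but your write-up should acknowledge it rather than claim a direct application.
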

\begin{proof}
  The proof of this claim is analogous to the proof of Claim~\ref{claim:localg}.  Fix an excellent pair $(D_0, E_0, B_0)$. We write $\newCons_{2T}$ for $\newCons_{2T}(D_0,E_0,B_0)$ for convenience. We begin with the following claim.
  \begin{claim}
  	\label{claim:xE_sampler_2}
  	For at least $1-O\left(\frac{\ln 1/\eps}{q'' n}\right)$ fraction of $x\in [n]\setminus D_0$,
  	$$\Pr_{\substack{(E, B)\sim \mathcal{D}(D_0, E_0), \\x\in E\cup B}}[(E, B)\in \newCons_{2T}]\geq \frac{1}{100}  \Pr_{(E, B)\sim \mathcal{D}(D_0, E_0)}[ (E, B)\in \newCons_{2T}]. $$
  \end{claim}
  \begin{proof}
  	The proof follows essentially from the sampler property of an inclusion graph. In this case, the distribution of $(E, B)$ is not uniform among all the sets of size $(q-cq')n$  of $[n]\setminus D_0$. Therefore we cannot apply Lemma~\ref{lemma:sampler} directly. However, similar proof works here and we include it below.
  	
  	For succinctness, let $\mathcal{D}$ be the distribution $\mathcal{D}(D_0, E_0)$ and let $\mathcal{D}_x$ be the distribution $\mathcal{D}$ conditioned on $x\in E\cup B$.  The way we sample $(E, B)\sim \mathcal{D}$, the marginal distribution on $x\in E\cup B$ is not the same. However, for every $x\in [n]\setminus (D_0\cup E_0)$, the marginal distribution is identical.  Same is true for every $x\in E_0$. We will show that at most $O\left(\frac{\ln 1/\tilde{\eps}}{q''}\right)$ many $x\in [n]\setminus (D_0\cup E_0)$, the claimed inequality dos not hold.  Similar poof shows that for at most $O\left(\frac{\ln 1/\tilde{\eps}}{q''}\right)$ many $x\in E_0$, the inequality does not hold.
  	
  	We know that  $\Pr_{(E, B)\sim \mathcal{D}}[ (E, B)\in \newCons_{2T}] = \tilde{\eps} \geq \eps^2/2$. Let $n' = n-|D_0|-|E_0| = (1-q')n$ and  $Z = \{x\in [n]\setminus (D_0\cup E_0) \mid \Pr_{(E, B)\sim \mathcal{D}_x}[S\in \newCons_{2T}] < \tilde{\eps}/100\}$. We need to show that $|Z| = O\left(\frac{\ln 1/\tilde{\eps}}{q''}\right)$.  For $x\in [n]\setminus (D_0\cup E_0)$, let $p$ be the probability that $x\in (E\cup B)$ when $(E, B)$ is sampled according to $\mathcal{D}$. Note that $p = \Omega(q'')$.
  	Suppose for contradiction,  $|Z| = C\frac{\ln 1/\tilde{\eps}}{p}$ for a large  $C>0$. By the definition of the set $Z$ we have,
  	\begin{equation}
  		\label{eq:sampler_lhs}
\Pr_{\substack{x\in [n]\setminus (D_0\cup E_0)\\ (E, B)\sim \mathcal{D}_x}}[x\in Z \land (E, B) \in \newCons_{2T} ] < \frac{|Z|}{n'}\cdot \frac{\tilde{\eps}}{100} = \frac{C\ln(1/\tilde{\eps})\cdot \tilde{\eps}}{100p(1-q')n}.
\end{equation}
  	If we sample a random $(E, B)\sim \mathcal{D}$, then by Chernoff's inequality, $E\cup B$ contains at least  $\frac{C}{2}\ln(1/\tilde{\eps})$ elements from $Z$ with probability at least
  	$1-\exp(-\Omega(C\ln 1/\tilde{\eps})) \geq 1-\tilde{\eps}/10$. Therefore, we have
 	$$\Pr_{\substack{(E, B)\sim \mathcal{D}}}\left[(E, B) \in \newCons_{2T} \land |(E\cup B)\cap Z|\geq \frac{C}{2}\ln(1/\tilde{\eps}) \right] \geq  \frac{\tilde{9\eps}}{10}.$$
 	Conditioned on the above event, if we sample a random $x\in(E\cup B)\setminus E_0$, then $x\in Z$ with probability at least $\frac{1}{2}\frac{C/2\ln(1/\tilde{\eps})}{pn}$.  Therefore, combining with the above event, we have
 	$$\Pr_{\substack{(E, B)\sim \mathcal{D}\\ x\in (E\cup B)\setminus E_0}}[x\in Z \land (E, B) \in \newCons_{2T} ] \geq \frac{9C\ln(1/\tilde{\eps})\tilde{\eps}}{40pn}.$$
 	Assuming $q'\leq 9/10$, this contradicts \eqref{eq:sampler_lhs}
  \end{proof}

  	We now proceed to proving the claim. Assume towards contradiction the statement of the claim  is false; then as $(E, B)\in \newCons_{2T}(D_0,E_0, B_0)$ with probability at least $\eps^2/2$ when sampled according to $\mathcal{D}(D_0, E_0)$, we get that
  	\begin{equation}
  		\label{eq:localg_contradiction}
  		\Pr_{(E,B)\sim \mathcal{D}(D_0, E_0)}{[(E,B)\in \newCons_{2T}(D_0, E_0, B_0)\ \land \ F[D_0, E, B]|_{E,B} \overset{>W^{2}}{\neq} g_{D_0,E_0,B_0}(E, B)]} > \frac{\nu\eps^2}{2}.
  	\end{equation}
  	Denote the above event by $Z$, and sample $(E,B)$ conditioned on $Z$. Let $X_{E,B}\subseteq E\cup B$ be the set of coordinates $x\in E\cup  B$ for which $F[D_0, E, B]|_{x} \neq g_{D_0, E_0, B_0}(x)$.  We have the following claim:
  	
  	\begin{claim}\label{claim:localg_c_2}
  		For all but $O\left(\frac{\ln 1/\eps}{q''}\right)$ many $x\in [n]\setminus D_0$, we have
  		\[\Pr_{(E',B')\in \newCons_{2T}(D_0, E_0, B_0)}[ x\in E'\cup B' \ \land \ g_{D_0, E_0, B_0}(x) = F[D_0, E', B']|_{x}]\geq \frac{q''}{2000}.
  		\]
  	\end{claim}
  	\begin{proof}
  		Immediate by Claim~\ref{claim:xE_sampler_2} and the fact that $g_{D_0, E_0, B_0}(x)$ was set according to the majority vote.
  	\end{proof}
  	
  	Let $X'_{E,B}\subseteq X_{E,B}$ be the set of coordinates for the inequality in Claim~\ref{claim:localg_c_2} holds. Note that given the event $Z$,
  	$|X'_{E,B}|\geq \card{X_{E,B}} - O\left(\frac{\ln 1/\eps}{q''}\right) \geq W^{2}/2$, so we may pick a subset of $X'_{E,B}$ of size $W^2/2$; without loss of generality
  	we assume $X'_{E,B}$ is already of that size. Picking $(E',B')\in \newCons_{2T}(D_0, E_0, B_0)$ randomly, by Claim~\ref{claim:localg_c_2} the expected size of $Y_{E', B'}:=\{ x \in E'\cup B'\mid x\in X'_{E,B}\ \&\  g_{D_0, E_0, B_0}(x) = F[D_0, E', B']|_{x}\}$ is at least $\frac{q''}{2000}W^{2}$. Therefore , by an averaging argument, as $\card{Y_{E',B'}}\leq \frac{W^2}{2}$ always,
  	with probability at least $\frac{q''}{4000}$ we have $|Y_{E',B'}|\geq \frac{q''}{4000}W^{2}$.
  	
  	Also, picking $(E',B')\sim \mathcal{D}(D_0, E_0)$ at random conditioned on $(E',B')\in \newCons_{2T}(D_0, E_0, B_0)$, we have by Chernoff's bound that $|(E\cup B) \cap (E'\cup B')| \geq \frac{q''^2 n}{10^{10}}$ with probability at least $1-\exp(-\Omega(q''^2n))$, as $\eps^2/2 \ll \exp(-\Omega(q''^2n))$. Therefore, by union bound, with probability at least $\frac{q''}{5000}$, we have
  	\begin{enumerate}
  		\item $\card{\{ x \in E'\cup B'\mid x\in X'_{E,B}\ \&\ g_{D_0, E_0, B_0}(x) = F[D_0, E', B']|_{x}\}}|\geq \frac{q''}{4000}W^{2}$.
  		\item $|B\cap B'| \geq \frac{q''^2 n}{10^{10}}$.
  	\end{enumerate}
  	Taking a random $E\subseteq (E\cup B)\cap (E'\cup B')$ of size $q''n/\rr$, with probability at least $1/2$, we get that $|E\cap Y_{E',B'}| \geq \frac{q''}{8000\rr}W^{2}$. Removing the conditioning on $(E', B')\in \newCons_{2T}(D_0, E_0, B_0)$ and the above three conditions, we get
  	\[
  	\Pr_{\substack{(E, B), (E',B')\sim \mathcal{D}(D_0, E_0)\\ E\subseteq  (E\cup B)\cap (E'\cup B')}}{\left[\substack{(E,B),  (E',B')\in \newCons_{2T}(D_0,E_0, B_0)\ \\ \land\ F[D_0, E, B]|_E \overset{>\frac{q''}{8000\rr}W^{2}}{\neq} F[D_0, E', B']|_E }\right]}\geq \frac{\nu\eps^2}{2}\cdot\frac{q''}{5000}\cdot \frac{\eps^2}{2}\cdot\frac{1}{2} > \gamma.
  	\]

  	Since $\frac{q''}{8000R}W^{2} \geq h$, this is a contradiction to the fact that $(D_0, E_0, B_0)$ is $(\eps^2/2, 2T, \rr,h, \gamma)$-excellent.
  \end{proof}

	\subsubsection{Consistency Between the Functions $g_{A,B}$ via newCons}
    In this section, we use the local structure result proved in the previous section (but with different parameters) and show a relationship between these local functions.
    To do so, we introduce a variant of the ``Cons'' set which allows us to compare the local functions $g_{A,B}$ and $g_{A',B'}$ for $A$ and $A'$ that are correlated (but
    not the same).

	Let $c_0$ be the constant from Theorem~\ref{thm:DPtest} and take $c\in \left[ \frac{q'}{2q}, \frac{2q'}{q} \right]$ from Claim~\ref{claim:newgood}.
	\begin{definition}
		For an $(\eps^2/2, 2 T,\rr,h, \gamma)$-excellent triple $(D_0, E_0, B_0)$, let $\newCons_{W^2}^\star(D_0, E_0,B_0)\subseteq \newCons_{2T}(D_0, E_0,B_0)$ consist
        of all $(E,B)$'s such that $F[D_0,E, B]|_{E,B} \overset{\leq W^{2}}{\neq} g_{D_0,E_0,B_0}(E,B)]$. Similarly, for an $(\eps^2/8, 2W^{2}, \tilde{\rr}, \tilde{h},  \gamma)$-excellent pair $(A_0, B_0)$, let $\Cons^\star_{2W^{2}}(A_0, B_0)\subseteq \Cons_{2W^{2}}(A_0, B_0)$ consists of $B$'s such that $F[A_0, B]|_{B} \overset{\leq W^{2}}{\neq} g_{A_0,B_0}(B)$.
	\end{definition}
    \begin{remark}
      Note that in the definition above we look at $\Cons_{2W^{2}}(A_0, B_0)$, which are the same as the ``Cons'' sets considered in previous sections except
      that the parameter $T$ is replaced by the parameter $2W^2$. To compensate for that, we have adjusted the excellence parameters accordingly and
      have taken $\tilde{\rr}$ and $\tilde{h}$ instead of $\rr$ and $h$ from the previous section.
    \end{remark}
	Note that by Claims~\ref{claim:newgood},~\ref{claim:localgnew}, for every excellent triple $(D_0, E_0, B_0)$, we have
	\begin{equation}
		\label{eq:consstar}
		\Pr_{ (E',B')\sim \mathcal{D}(D_0,E_0)}{[(E',B')\in \newCons_{W^2}^\star(D_0,E_0,B_0)]} \geq \eps^2/4.
	\end{equation}
	
	\begin{definition}	Fix $(D_0, E_0, B_0)$ that is $(\eps^2/2, 2 T)$-good. A pair $(E,B)\in \newCons_{W^2}^\star(D_0, E_0, B_0)$ is called a {\em dense pair}, if $\Pr_{B'\subseteq [n]\setminus (D_0\cup E)}{[(E,B')\in \newCons_{W^2}^\star(D_0, E_0,B_0)]} \geq \eps^2/8$.
	\end{definition}
	
	The next claim shows that many sets $(E, B)$ in $\newCons_{W^2}^\star(D_0, E_0, B_0)$ when we pair it with $D_0$ as $(D_0\cup E, B)$, then the pair $(D_0\cup E, B)$ is excellent for \maintest.
	
	\begin{claim}
		\label{claim:old_new_excellence}
		Let $(D_0, E_0, B_0)$ be an $(\eps^2/2, 2 T, \rr, h, \gamma)$-excellent triple; then,
		\[
\Pr_{(E, B)\in \newCons_{W^2}^\star(D_0, E_0, B_0)}{[ (E,B) \text{ is dense } \land\ (D_0\cup E, B)\ is\ (\eps^2/8, 4W^{2}, \tilde{\rr}, \tilde{h},  \gamma)\text{-excellent}]} \geq \eps^2/16.
\]
	\end{claim}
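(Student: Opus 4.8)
\textbf{Proof plan for Claim~\ref{claim:old_new_excellence}.} The claim packages two assertions into one: that a random $(E,B)\in \newCons_{W^2}^\star(D_0,E_0,B_0)$ is dense, and that pairing it with $D_0$ yields a pair $(D_0\cup E, B)$ which is $(\eps^2/8, 4W^2, \tilde\rr, \tilde h, \gamma)$-excellent for \maintest. I would prove each assertion separately and then combine by a union bound. For the density assertion, the point is a standard second-moment/Markov argument: by~\eqref{eq:consstar}, the measure of $\newCons_{W^2}^\star(D_0,E_0,B_0)$ inside $\mathcal{D}(D_0,E_0)$ is at least $\eps^2/4$. Writing the probability over $(E,B)$ that $(E,B)\in \newCons_{W^2}^\star$ as an average, over the choice of $E$, of the conditional probability over $B$, one sees that the set of ``sparse'' $(E,B)$ (those with conditional probability $<\eps^2/8$) can contribute at most $\eps^2/8$ to the total mass, hence at least half the mass of $\newCons_{W^2}^\star$ comes from dense pairs. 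Thus the conditional probability that a random $(E,B)\in\newCons_{W^2}^\star(D_0,E_0,B_0)$ is dense is at least $1/2$.

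For the excellence assertion, I would first note goodness: if $(E,B)$ is dense, then by definition $\Pr_{B'\subseteq [n]\setminus (D_0\cup E)}[(E,B')\in \newCons_{W^2}^\star(D_0,E_0,B_0)]\geq \eps^2/8$, and whenever $(E,B')\in \newCons_{W^2}^\star(D_0,E_0,B_0)$ we have that $F[D_0\cup E, B']|_{D_0\cup E}$ agrees with $g_{D_0,E_0,B_0}$ on all but $2T + W^2 \leq 2W^2$ coordinates on $D_0$ (using that $(E,B')$ is $2T$-consistent with $(D_0,E_0,B_0)$ and that the table agrees with $g_{D_0,E_0,B_0}$ up to $W^2$ errors), and similarly for $(E,B)$. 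Hence $F[D_0\cup E, B]|_{D_0}$ and $F[D_0\cup E, B']|_{D_0}$ agree on all but $4W^2$ coordinates, so $B'\in \Cons_{4W^2}(D_0\cup E, B)$, giving $(\eps^2/8, 4W^2)$-goodness of $(D_0\cup E, B)$. For the excellence (the low-probability of a ``bad disagreement'' event on a small random subset $E'$ of the intersection), I would invoke the local-structure machinery of Section~\ref{sec:dp_lst} applied with the parameter $T$ replaced by $2W^2$ and the excellence parameters $\tilde\rr, \tilde h$: Claim~\ref{claim:excellent} (with these adjusted parameters) says that all but a $2^{-\Omega(\tilde h)}/\gamma$ fraction of $(\eps^2/8,2W^2)$-good pairs are in fact excellent, and $2^{-\Omega(\tilde h)}/\gamma \ll \eps^2$ by the hierarchy~\eqref{eq:hier_dp2},~\eqref{eq:hier_dp}. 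The delicate point is that the pairs $(D_0\cup E, B)$ arising here are not uniformly distributed among all good pairs — they are constrained by the correlation with the fixed $(D_0, E_0, B_0)$ — so I would instead apply the averaging inside $\newCons_{W^2}^\star(D_0,E_0,B_0)$ directly, tracing through the proof of Claim~\ref{claim:excellent} with the distribution on $(E,B)$ induced by $\mathcal{D}(D_0,E_0)$ conditioned on membership in $\newCons_{W^2}^\star$; because this conditional distribution still has each atom of probability at least $\eps^2/4$ times the base measure, the same Chernoff bound controlling $|D''\cap D_0|$ goes through with the same $2^{-\Omega(\tilde h)}$ bound, only inflated by a $\poly(1/\eps)$ factor which remains $\ll 1$.

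Combining, the fraction of $(E,B)\in\newCons_{W^2}^\star(D_0,E_0,B_0)$ that are dense is at least $1/2$, the fraction that fail to produce an excellent $(D_0\cup E, B)$ is at most $\poly(1/\eps)\cdot 2^{-\Omega(\tilde h)}/\gamma \leq \eps^2/100$ by~\eqref{eq:hier_dp2}, so by the union bound at least $1/2 - \eps^2/100 \geq \eps^2/16$ fraction satisfy both conditions, as desired.

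\textbf{Main obstacle.} The one genuinely non-routine step is handling the non-uniformity of the distribution on $(E,B)$: the argument of Claim~\ref{claim:excellent} is phrased for $(A_0,B_0)$ drawn from the natural uniform-ish distribution, whereas here $(E,B)$ is drawn from $\mathcal{D}(D_0,E_0)$ conditioned on lying in $\newCons_{W^2}^\star(D_0,E_0,B_0)$, which depends on the fixed excellent triple. The fix is that conditioning on an event of probability $\geq \eps^2/4$ inflates every probability by at most $4/\eps^2$, and since the error term $2^{-\Omega(\tilde h)}/\gamma$ we are fighting against is doubly-exponentially smaller than $\eps^2$ (because $\tilde h$ is taken polynomially large in $\log(1/\eps)/q'$ via~\eqref{eq:hier_dp2}), this inflation is harmless; nonetheless one must re-run the Chernoff estimate controlling $|D''\cap D_0|$ with the conditioned measure and check the resampling-of-$D_0$ step still behaves correctly, which is where the care is needed.
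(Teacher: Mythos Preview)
Your proposal is correct and follows essentially the same route as the paper: an averaging argument for density (you obtain the sharper bound $\geq 1/2$ where the paper settles for $\geq \eps^2/8$), the same triangle-inequality argument that density forces $(\eps^2/8,4W^2)$-goodness of $(D_0\cup E,B)$, and then an appeal to the good-implies-excellent machinery of Claim~\ref{claim:excellent} with the adjusted parameters $\tilde\rr,\tilde h$. The paper handles the excellence step by simply subtracting an $O(\nu/\eps^2)$ correction without spelling out the distributional issue; you correctly flag that the pairs $(D_0\cup E,B)$ are not drawn from the natural distribution assumed in Claim~\ref{claim:excellent} and propose re-running the Chernoff step under the conditioned measure, which is exactly what is needed to make the paper's terse correction rigorous (and harmless, since the $2^{-\Omega(\tilde h)}/\gamma$ slack swamps any $\poly(1/\eps)$ inflation).
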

	\begin{proof}
		Since $(D_0, E_0, B_0)$ is excellent, by using \eqref{eq:consstar} and an averaging argument we have that at least $\eps^2/8$ fraction of
$(E,B)\in \newCons_{W^2}^\star(D_0, E_0, B_0)$ are dense pairs. 	We will show that for such pair $(E,B)$, $(D_0\cup E, B)$ is $(\eps^2/8, 2W^{2})$-good.

Fix any such $(E,B)$. By the definition of a dense pair, we have
		\[
\Pr_{\substack{B'\subseteq [n]\setminus D_0\cup E}}{\left[\substack{g_{D_0,E_0,B_0}(D_0\cup E) \overset{\leq 2W^{2}}{\neq}  F[D_0, E, B]|_{D_0 \cup E}\text{ and}\\  \ g_{D_0,E_0,B_0}(D_0\cup E) \overset{\leq 2W^{2}}{\neq}  F[D_0, E, B']_{D_0\cup E}}\right]}\geq \eps^2/8.
\]
		Note that the first event in the above probability follows from the fact that $(E, B) \in \newCons_{W^2}^\star(D_0, E_0, B_0)$ ($W^2$ accounts for the possible disagreement on $E$ and another $W^2\geq 2T$ accounts for the disagreement on $D_0$) and the second event follows from the fact that for a random $B'$,
        $(E,B')\in \newCons_{W^2}^\star(D_0, E_0, B_0)$ with probability at least $\eps^2/8$, as $(E, B)$ is a dense pair. It follows that
		\[
        \Pr_{\substack{B'\subseteq [n]\setminus D_0\cup E}}{\left[F[D_0, E, B]|_{D_0 \cup E}\overset{\leq 4W^{2}}{\neq}  F[D_0, E, B']_{D_0\cup E}\right]}\geq \eps^2/8,
        \]
		and hence $(D_0\cup E, B)$ is $(\eps^2/8, 4W^{2})$-good.
        By Claim~\ref{claim:excellent} at least $1-\nu$ fraction of
        the $(\eps^2/8, 4W^{2})$-good pairs are  $(\eps^2/8, 4W^{2}, \tilde{\rr}, \tilde{h}, \gamma)$-excellent,
        it follows that at least $\eps^2/8 - O(\nu/\eps^2)\geq \eps^2/16$ of the pairs $(E,B)\in \newCons_{W^2}^\star(D_0,E_0,B_0)$ it holds that $(D_0\cup E, B)$ is excellent,
        concluding the proof.
	\end{proof}

	Claim~\ref{claim:old_new_excellence} will help us to relate the functions $g_{D_0, E_0, B_0}$ and $g_{D_0\cup E, B}$ where $(D_0, E_0, B_0)$ is excellent
and $(E,B)$ is a typical pair from $\newCons_{W^2}^\star(D_0, E_0, B_0)$. Since the functions $g_{D_0, E_0, B_0}$ and  $g_{D_0\cup E, B}$ are defined based on their respective
``Cons" set, we need to make sure that  the sets in $\Cons^\star_{2W^2}$ and $\newCons_{W^2}^\star$ are correlated.  The following claim (and in particular the third item)
shows that this is indeed true.
	
	\begin{claim}
		\label{claim:commonCons}
		Fix any $(\eps^2/2, 2 T, \rr, W, \gamma)$-excellent triple $(D_0,E_0, B_0)$. When $(E,B)$ is sampled according to the distribution $\mathcal{D}(D_0, E_0)$,
then with probability at least $\Omega(\eps^4)$, we have:
		\begin{enumerate}
			\item $(E,B)\in \newCons_{W^2}^\star(D_0, E_0, B_0)$.
			\item  $(D_0\cup E, B)$ is $(\eps^2/8, 4W^{2}, \tilde{\rr}, \tilde{h}, \gamma)$-excellent.
			\item There is at least an $\Omega(\eps^2)$ fraction of $B'\subseteq [n]\setminus (D_0\cup E)$ such that
        $(E, B')\in \newCons_{W^2}^\star(D_0, E_0, B_0)$ and $B'\in \Cons^\star_{4W^2}(D_0\cup E, B)$.
		\end{enumerate}
	\end{claim}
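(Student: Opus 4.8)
\textbf{Proof proposal for Claim~\ref{claim:commonCons}.} The plan is to obtain items 1 and 2 immediately by combining \eqref{eq:consstar} with Claim~\ref{claim:old_new_excellence}, and then to argue that item 3 is a \emph{deterministic} consequence of items 1, 2 together with the density of the pair $(E,B)$. For the first part: by \eqref{eq:consstar}, sampling $(E,B)\sim \mathcal{D}(D_0,E_0)$ we have $\Pr[(E,B)\in \newCons_{W^2}^\star(D_0,E_0,B_0)]\geq \eps^2/4$, and conditioned on this, Claim~\ref{claim:old_new_excellence} gives that with probability at least $\eps^2/16$ the pair $(E,B)$ is dense and $(D_0\cup E,B)$ is $(\eps^2/8,4W^2,\tilde{\rr},\tilde{h},\gamma)$-excellent. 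Hence with probability at least $\eps^4/64 = \Omega(\eps^4)$ over $(E,B)$, items 1 and 2 hold and $(E,B)$ is dense; it remains to verify that item 3 holds for every such $(E,B)$.

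The key observation for item 3 is a block-wise triangle inequality. Fix $(E,B)$ with items 1, 2 and density, and let $B'\subseteq [n]\setminus(D_0\cup E)$ be any set with $(E,B')\in \newCons_{W^2}^\star(D_0,E_0,B_0)$; I claim $B'\in \Cons_{4W^2}(D_0\cup E,B)$. Indeed, unpacking the definition: on the block $D_0$, both $F[D_0,E,B]|_{D_0}$ and $F[D_0,E,B']|_{D_0}$ differ from $F[D_0,E_0,B_0]|_{D_0}$ in at most $2T$ coordinates (membership in $\newCons_{2T}$), hence from each other in at most $4T$ coordinates; on the block $E$, both $F[D_0,E,B]|_{E}$ and $F[D_0,E,B']|_{E}$ differ from $g_{D_0,E_0,B_0}(E)$ in at most $W^2$ coordinates (the $\overset{\leq W^2}{\neq}$ guarantee of $\newCons_{W^2}^\star$ restricted to $E$), hence from each other in at most $2W^2$ coordinates. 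Adding up, $F[D_0\cup E,B]$ and $F[D_0\cup E,B']$ differ on $D_0\cup E$ in at most $4T+2W^2\leq 4W^2$ coordinates by \eqref{eq:hier_dp2}, which is exactly $B'\in \Cons_{4W^2}(D_0\cup E,B)$.

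To finish item 3, density of $(E,B)$ gives $\Pr_{B'}[(E,B')\in \newCons_{W^2}^\star(D_0,E_0,B_0)]\geq \eps^2/8$, and by the previous paragraph every such $B'$ lies in $\Cons_{4W^2}(D_0\cup E,B)$. Since $(D_0\cup E,B)$ is $(\eps^2/8,4W^2,\tilde{\rr},\tilde{h},\gamma)$-excellent, applying Lemma~\ref{lemma:localg} to it (with the roles of $T,\rr,h$ there played by $4W^2,\tilde{\rr},\tilde{h}$) shows that at most a $\nu$-fraction of $\Cons_{4W^2}(D_0\cup E,B)$ fails to lie in $\Cons^\star_{4W^2}(D_0\cup E,B)$; in absolute mass over all $B'$ this is at most $\nu = \eps^{10}$. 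As $\nu\ll \eps^2$, we get $\Pr_{B'}[(E,B')\in \newCons_{W^2}^\star\ \land\ B'\in \Cons^\star_{4W^2}(D_0\cup E,B)]\geq \eps^2/8-\nu\geq \eps^2/16 = \Omega(\eps^2)$, which is item 3. Combined with the $\Omega(\eps^4)$ bound from the first paragraph this proves the claim.

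\textbf{Where the work is.} The substantive step is the block-wise triangle inequality, which forces $B'$ into $\Cons_{4W^2}(D_0\cup E,B)$ precisely because both $F[D_0,E,\cdot]|_{D_0}$ and $F[D_0,E,\cdot]|_{E}$ are pinned down (by $F[D_0,E_0,B_0]|_{D_0}$ and by $g_{D_0,E_0,B_0}(E)$ respectively) up to constantly many coordinates. The rest is parameter bookkeeping, and the one genuinely delicate point there is that Lemma~\ref{lemma:localg} must be invoked with its excellence parameters shifted from $(T,\rr,h)$ to $(4W^2,\tilde{\rr},\tilde{h})$ and its conclusion threshold matched to the one used in the definition of $\Cons^\star_{4W^2}$; this is exactly the purpose of the tilde-layer $\tilde h,\tilde R,\tilde W$ of the hierarchy in \eqref{eq:hier_dp2}, and one should double-check that $\tilde h,\tilde R,\tilde W$ were chosen large enough for the proof of Lemma~\ref{lemma:localg} (i.e.\ the analogue of the inequality $\frac{q''}{200\rr}W^2\geq h$) to go through at this shifted scale.
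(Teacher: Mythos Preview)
Your proposal is correct and follows essentially the same route as the paper's proof: obtain items 1 and 2 (plus density of $(E,B)$) from \eqref{eq:consstar} and Claim~\ref{claim:old_new_excellence}, then use a triangle-inequality argument through $g_{D_0,E_0,B_0}$ to force any $B'$ with $(E,B')\in\newCons_{W^2}^\star$ into $\Cons_{4W^2}(D_0\cup E,B)$. Your block-wise split into $D_0$ and $E$ is a slightly more explicit version of the paper's single-line bound $g_{D_0,E_0,B_0}(D_0\cup E\cup B)\overset{\leq 2W^2}{\neq}F[D_0\cup E\cup B]$, but the content is identical.

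The one place you go a bit further than the paper is the explicit upgrade from $\Cons_{4W^2}$ to $\Cons^\star_{4W^2}$ via Lemma~\ref{lemma:localg} at the tilde scale; the paper simply writes ``and the claim follows'' after landing in $\Cons_{4W^2}$, leaving that step implicit. Your caveat about the parameter bookkeeping (that the analogue of $\frac{q''}{200\rr}W^2\geq h$ must hold at the shifted scale $4W^2,\tilde{\rr},\tilde{h}$) is well taken and is indeed what the choice of $\tilde h,\tilde R,\tilde W$ in \eqref{eq:hier_dp2} is designed to accommodate.
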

	\begin{proof}
		Using Claim~\ref{claim:old_new_excellence} and \eqref{eq:consstar}, for a random $(E,B)\sim \mathcal{D}(D_0, E_0)$, with probability at least $\Omega(\eps^4)$, we have
		\begin{enumerate}
			\item[(i)] $(E,B)\in \newCons_{W^2}^\star(D_0, E_0, B_0)$,
			\item[(ii)] $(D_0\cup E,B)$ is $(\eps^2/8, 4W^{2}, \tilde{\rr}, \tilde{h}, \gamma)$-excellent, and
			\item[(iii)] $\Pr_{B'\subseteq [n]\setminus (D_0\cup E)}{[(E,B')\in \newCons_{W^2}^\star(D_0, E_0, B_0)]} = \Omega(\eps^2).$
		\end{enumerate}
		The condition (i) above implies
		\begin{equation}
			\label{eq:cond_a_newcons_excellent}
			g_{D_0, E_0, B_0}(D_0\cup E\cup B) \overset{\leq 2W^{2}}{\neq} F[D_0\cup E\cup B],
		\end{equation}
		where the additional $W^2\geq 2T$ accounts for the possible disagreement on $D_0$. Using the condition (iii) above, we also have
		\begin{equation}
			\label{eq:cond_b_newcons_excellent}
			g_{D_0, E_0, B_0}(D_0\cup E\cup B')  \overset{\leq 2W^{2}}{\neq} F[D_0\cup E\cup B']
		\end{equation}
	for at least $\Omega(\eps^2)$ fraction of $B'\subseteq [n]\setminus (D_0\cup E)$.
    Using \eqref{eq:cond_a_newcons_excellent} and \eqref{eq:cond_b_newcons_excellent}, for at least $\Omega(\eps^2)$ fraction of $B'$
    we get that $ F[D_0\cup E\cup B]|_{D_0\cup E} \overset{\leq 4W^{2}}{\neq} F[D_0\cup E\cup B']|_{D_0\cup E}$, in
    which case $B'\in \Cons_{4 W^2}(D_0\cup E, B)$ and the claim follows.
	\end{proof}

	Using the above claim, we show that for every $(D_0, E_0, B_0)$ that is excellent,  the functions $g_{D_0, E_0, B_0}$ and $g_{D_0\cup E, B}$ are very close to each other in hamming distance for many pairs $(E, B)$.
	\begin{claim}
		\label{claim:oldnew_g_consistency}
		Fix any  $(\eps^2/2, 2 T, \rr, h, \gamma)$-excellent triple $(D_0,E_0, B_0)$. Then
		\[
        \Pr_{(E,B)\sim\mathcal{D}(D_0, E_0)}{[g_{D_0, E_0, B_0} \overset{\leq \tilde{W}^3}{\neq} g_{D_0\cup E, B}]}\ggg \eps^4.
        \]
	\end{claim}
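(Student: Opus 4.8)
\textbf{Proof plan for Claim~\ref{claim:oldnew_g_consistency}.}
The plan is to use Claim~\ref{claim:commonCons} as the main engine. Fix an excellent triple $(D_0,E_0,B_0)$. By Claim~\ref{claim:commonCons}, with probability $\ggg \eps^4$ over $(E,B)\sim\mathcal{D}(D_0,E_0)$ the following three things hold simultaneously: $(E,B)\in\newCons_{W^2}^\star(D_0,E_0,B_0)$; the pair $(D_0\cup E,B)$ is $(\eps^2/8,4W^2,\tilde\rr,\tilde h,\gamma)$-excellent (so that $g_{D_0\cup E,B}$ is well defined and satisfies Lemma~\ref{lemma:localg} with the tilde-parameters); and there is an $\Omega(\eps^2)$-fraction of $B'\subseteq[n]\setminus(D_0\cup E)$ that lies simultaneously in $\newCons_{W^2}^\star(D_0,E_0,B_0)$ and in $\Cons^\star_{4W^2}(D_0\cup E,B)$. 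Call such an $(E,B)$ a \emph{nice} pair; we will show $g_{D_0,E_0,B_0}\overset{\leq \tilde W^3}{\neq} g_{D_0\cup E,B}$ for every nice pair, which gives the claim.

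So fix a nice pair $(E,B)$, and suppose toward contradiction that $g_{D_0,E_0,B_0}$ and $g_{D_0\cup E,B}$ disagree on a set $\Delta\subseteq[n]$ with $|\Delta|>\tilde W^3$. The idea is a standard ``common witness'' argument: for a random $B'$ from the $\Omega(\eps^2)$-fraction guaranteed by the third item, the set $D_0\cup E\cup B'$ is a common element of the two relevant Cons-families, so the table $F[D_0\cup E\cup B']$ is close (in Hamming distance $O(W^2)$) to $g_{D_0,E_0,B_0}$ restricted to that set \emph{and} close (in Hamming distance $O(W^2)$) to $g_{D_0\cup E,B}$ restricted to that set. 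Here the first closeness uses $(E,B')\in\newCons_{W^2}^\star(D_0,E_0,B_0)$ together with the definition of $g_{D_0,E_0,B_0}$ via majority vote over $\newCons_{2T}$, and the second uses $B'\in\Cons^\star_{4W^2}(D_0\cup E,B)$ together with the definition of $g_{D_0\cup E,B}$. Combining via the triangle inequality, on the set $(D_0\cup E\cup B')$ the two local functions $g_{D_0,E_0,B_0}$ and $g_{D_0\cup E,B}$ agree on all but $O(W^2)$ coordinates. Now I would invoke a sampler/averaging step: since $|\Delta|>\tilde W^3\gg W^2$ and a random $B'$ of this form covers each fixed coordinate of $[n]\setminus(D_0\cup E)$ with probability $\Omega(q'')$ (Lemma~\ref{lemma:sampler} applied to the appropriate inclusion graph, restricted to the $\Omega(\eps^2)$-dense subfamily), with positive probability $B'$ captures more than $O(W^2)$ coordinates of $\Delta$, contradicting the previous sentence. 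Since the parameters satisfy $\tilde W^3\gg \tilde W^2 \gg W^2 \gg \tilde h/\tilde R$ by~\eqref{eq:hier_dp2}, there is enough room to absorb all the error terms coming from the two applications of the local-structure lemma (with parameters $(2T,W^2)$ for the first family and $(4W^2,\tilde W^2)$ for the second).

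The main obstacle I anticipate is bookkeeping the disagreement budgets correctly: $g_{D_0,E_0,B_0}$ is only $W^2$-close to the table on $\newCons_{W^2}^\star$ on the $E\cup B$ part (plus $2T$ on $D_0$), while $g_{D_0\cup E,B}$ is $\tilde W^2$-close on $\Cons^\star_{4W^2}$, and these two families index sets differently ($D_0\cup E$ is ``fixed'' for the second but split as $D_0$ plus a variable $E$ for the first). One must check that the coordinates in $\Delta$ that survive into $D_0\cup E\cup B'$ are genuinely coordinates on which both local functions have committed to values via their respective majority votes — this is exactly why the third bullet of Claim~\ref{claim:commonCons} insists on membership in the \emph{starred} Cons-sets rather than the plain ones, so I would lean on that. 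Modulo this careful accounting, which parallels the proof of Claim~\ref{claim:localg}, the argument is routine: a contradiction with the excellence of $(D_0,E_0,B_0)$ (or rather with the density of good $B'$'s established in Claim~\ref{claim:commonCons}) forces $|\Delta|\leq\tilde W^3$, and since nice pairs have density $\ggg\eps^4$, the stated bound follows.
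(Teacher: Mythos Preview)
Your approach is correct and essentially the same as the paper's: both start from Claim~\ref{claim:commonCons}, use the common witness $F[D_0\cup E\cup B']$ to show the two local functions agree on each such set up to $O(\tilde W^2)$ (not $O(W^2)$, as you correctly note in your obstacle paragraph), and then convert this to a global bound. The only cosmetic differences are that the paper invokes Claim~\ref{claim:the_following_claim} directly rather than phrasing the last step as a sampler contradiction, and it handles the $D_0\cup E$ portion separately (bounding the disagreement there by $4\tilde W^2$ via the specific set $D_0\cup E\cup B$) before applying Claim~\ref{claim:the_following_claim} only on $[n]\setminus(D_0\cup E)$.
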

	\begin{proof}
		Select a pair $(E,B)$ according to the distribution $\mathcal{D}(D_0, E_0)$.
		Using Claim~\ref{claim:commonCons} with probability at least $\Omega(\eps^4)$, we have
        $g_{D_0,E_0,B_0}(D_0\cup E\cup B) \overset{\leq 2W^2}{\neq} F[D_0\cup E\cup B]$ ,
        $(D_0\cup E, B)$ is  $(\eps^2/8, 4W^2, \tilde{\rr}, \tilde{h}, \gamma)$-excellent and
        $g_{D_0\cup E, B}(D_0\cup E\cup B)  \overset{\leq 3\tilde{W}^2}{\neq} F[D_0\cup E\cup B]$.
        Note that in the last condition, we used the fact that  if $(D_0\cup E, B)$ is excellent and if
        $B'\in \Cons^\star_{2W^{2}}(D_0\cup E, B)$, then $(D\cup E, B')$ is also excellent and furthermore
        the functions $g_{D_0\cup E, B}$ and  $g_{D_0\cup E, B'}$ are the same as they are defined using the set
        $\Cons^\star_{4W^{2}}(D_0\cup E, B) = \Cons^\star_{4W^{2}}(D_0\cup E, B')$.
		
		Combining these events and Claim~\ref{claim:commonCons}, we get that over the randomness of $(E,B)$, with probability at least $\Omega(\eps^4)$, the following happens:
		\begin{enumerate}
			\item $(D_0\cup E, B)$ is  $(\eps^2/8, 4W^2, \tilde{\rr}, \tilde{h}, \gamma)$-excellent,
			\item $g_{D_0,E_0,B_0}(D_0\cup E\cup B)\overset{\leq 4\tilde{W}^2}{\neq}  g_{D_0\cup E, B}(D_0\cup E\cup B)$, and
			\item $\Pr_{B'\subseteq [n]\setminus (D_0\cup E)}{[(E, B')\in \newCons_{W^2}^\star(D_0, E_0, B_0)\ \land \ B'\in \Cons^\star_{4W^2}(D_0\cup E, B)]} \ggg \eps^2$.
		\end{enumerate}
		From the third point we can conclude that $\Pr_{B'\subseteq [n]\setminus (D_0\cup E)}{[g_{D_0, E_0, B_0}(B') \overset{\leq 4W^2}{\neq} g_{D_0\cup E, B}(B')]} \ggg \eps^2$. Using Claim~\ref{claim:the_following_claim}, we get that
		$g_{D_0, E_0, B_0}|_{[n]\setminus (D_0\cup E)} \overset{\leq 16 W^2/q''}{\neq} g_{D_0\cup E, B}|_{[n]\setminus (D_0\cup E)}$.
        Combined with the second item above saying that $g_{D_0,E_0,B_0}$ and $g_{D_0\cup E, B}$ differ on at most $4\tilde{W}^2$ elements from $D_0\cup E$,
        we get that $g_{D_0, E_0, B_0}$ and $g_{D_0\cup E, B}$ differ on at most $\tilde{W}^3$ elements.
	\end{proof}

	\begin{claim}\label{claim:the_following_claim}
		For any $q\in (0,1)$, $a\geq 1$ and $\delta >  e^{-qa/8}$, given two functions
$f, g : [n] \rightarrow \{0,1\}$ such that $\mathop{\Pr}_{S\subseteq [n] \mid |S| = qn}[f(S) \overset{\leq a}{\neq} g(S)]\geq \delta$, then
$| \{i\in [n] \mid f(i) \neq  g(i)\}|\leq  4a/q$.
	\end{claim}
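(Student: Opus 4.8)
The plan is a standard concentration argument: the event $f(S)\overset{\le a}{\neq}g(S)$ is exactly the event that $S$ captures at most $a$ of the coordinates on which $f$ and $g$ disagree, so if there were too many such coordinates, a random large $S$ would almost surely capture more than $a$ of them.

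Concretely, let $D=\{i\in[n]:f(i)\neq g(i)\}$, and suppose towards contradiction that $|D|>4a/q$. First I would observe that for $S\subseteq[n]$ of size $qn$, we have $f|_S\overset{\le a}{\neq}g|_S$ if and only if $|S\cap D|\le a$; hence the hypothesis reads $\Pr_{S}[\,|S\cap D|\le a\,]\ge\delta$. Now $|S\cap D|$ is a hypergeometric random variable (sampling $qn$ elements from $[n]$ without replacement and counting how many land in $D$), with mean $\mu=q|D|>4a$. Since $a<\mu/4$, the hypothesis forces $\Pr_S[\,|S\cap D|\le \mu/4\,]\ge\delta$.

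The next step is to bound this lower tail. I would use the classical fact (Hoeffding/Chvátal) that the hypergeometric distribution is at least as concentrated as the binomial $\mathrm{Bin}(qn,|D|/n)$ of the same mean $\mu$, so that the multiplicative Chernoff lower-tail bound applies: $\Pr_S[\,|S\cap D|\le(1-\tfrac34)\mu\,]\le \exp(-(\tfrac34)^2\mu/2)=\exp(-9\mu/32)$. Plugging in $\mu>4a$ gives $\Pr_S[\,|S\cap D|\le \mu/4\,]<\exp(-9a/8)$. Finally, since $q<1$ we have $\exp(-9a/8)<\exp(-qa/8)<\delta$, which contradicts the displayed hypothesis. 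Therefore $|D|\le 4a/q$, as claimed. (Alternatively, if one prefers to avoid quoting the hypergeometric-vs-binomial comparison, the same bound can be obtained by directly estimating $\binom{n}{qn}^{-1}\sum_{j\le a}\binom{|D|}{j}\binom{n-|D|}{qn-j}$ with crude ratio bounds, but the Chernoff route is cleaner.)

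I do not expect any real obstacle here: the only point requiring care is invoking a valid tail inequality for sampling without replacement, and ensuring the constants line up so that the resulting bound $e^{-9a/8}$ comfortably beats the given threshold $e^{-qa/8}$ for all $q\in(0,1)$ and $a\ge1$ — which it does with room to spare.
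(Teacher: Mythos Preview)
Your proposal is correct and follows essentially the same approach as the paper: suppose for contradiction that the disagreement set is large, then use a Chernoff-type tail bound to show that a random $qn$-set almost surely captures more than $a$ disagreements, contradicting $\delta>e^{-qa/8}$. The only difference is in execution: you apply the multiplicative Chernoff bound directly to the hypergeometric variable $|S\cap D|$ (via the standard comparison with the binomial), whereas the paper instead samples $S\subseteq_{q/2}[n]$ with independent coordinates, bounds $\Pr[|S\cap\mathcal{E}|\le a]$ for that product distribution, and then transfers the bound to uniform $qn$-sets. Your route is slightly cleaner and avoids the detour through the biased product distribution.
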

	\begin{proof}
		Let $\mathcal{E} = \{ i \in [n]\mid f(i)\neq g(i)\}$, and suppose towards  contradiction $|\mathcal{E}| \geq 4a/q$.
        Sample a subset $S\subseteq_{q/2} [n]$; by the Chernoff's bound $|S| \leq qn$ with probability at least $1-\exp(-\Omega(q^2n))$.
        Furthermore, the expected size of $S\cap \mathcal{E}$ is $2a$. Now, again by the Chernoff's bound, the probability that $|S\cap \mathcal{E}|$ is at most
        $a$ is at most $e^{-qa/4}$. This implies $\mathop{\Pr}_{S\subseteq [n] \mid |S| = qn}[f(S) \overset{\leq a}{\neq} g(S)]\leq  e^{-qa/8}$ which is a contradiction.
	\end{proof}
	
	Claim~\ref{claim:oldnew_g_consistency} allows us to relate functions defined used ``Cons'' and functions defined using ``newCons'',
    and using it along with an application of the Cauchy-Schwarz inequality we conclude relations between functions defined using
    ``Cons'' (but with different $A_0$'s).
	\begin{claim}
		\label{claim:correlated_g_closeness}
		There exists a constant $d_0\in (0,1)$ such that for $c' = d_0c$, we have the following. For at least $\Omega(\eps^{12})$ fraction of excellent pairs $(A_0,B_0)$,
		\[
        \Pr_{(\tilde{A_0}, B)}{[g_{A_0, B_0} \overset{\leq 2\tilde{W}^3}{\neq} g_{\tilde{A_0}, B}]}\ggg \eps^{12}.
        \]
		Here, $\tilde{A_0}$ is distributed uniformly conditioned on $|\tilde{A_0}| = q'n$ , $|A_0\cap \tilde{A_0}| = c'q'n$ and $B\subseteq[n]\setminus \tilde{A_0}$ of size $(q-q')n$.
	\end{claim}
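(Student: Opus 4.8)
The plan is to bridge the two ``Cons''-based functions $g_{A_0,B_0}$ and $g_{\tilde A_0,B}$ through a common ``newCons''-based function $g_{D_0,E_0,B_0}$ attached to a \test\ triple, and then to convert the resulting statement about a random correlated pair $(\tilde A_0,B)$ into the ``most excellent pairs'' form by an averaging argument, the Cauchy--Schwarz step being the inequality $\E[X^2]\ge\E[X]^2$.

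Concretely, I would first set up the sampling: sample a \test\ triple $(D_0,E_0,B_0)$ --- equivalently, a test-distributed pair $(A_0,B_0)$ together with a uniformly random $cq'n$-subset $D_0\subseteq A_0$ and $E_0=A_0\setminus D_0$ --- and then sample two independent pairs $(E_1,B_1),(E_2,B_2)\sim\mathcal{D}(D_0,E_0)$. By Claims~\ref{claim:newgood} and~\ref{claim:excellentnew} the triple $(D_0,E_0,B_0)$ is $(\eps^2/2,2T,\rr,h,\gamma)$-excellent with probability $\ggg\eps^2$. Conditioned on excellence, Claim~\ref{claim:oldnew_g_consistency} together with Claim~\ref{claim:commonCons} says that a fresh sample $(E,B)\sim\mathcal{D}(D_0,E_0)$ is \emph{good} --- meaning $(D_0\cup E,B)$ is excellent and $g_{D_0,E_0,B_0}\overset{\leq\tilde W^3}{\neq}g_{D_0\cup E,B}$ --- with probability $\ggg\eps^4$. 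Letting $X$ be the conditional-on-the-triple probability that a fresh sample is good, we then have $\E[X]\ggg\eps^2\cdot\eps^4=\eps^6$; since the two fresh samples are conditionally independent given the triple, Cauchy--Schwarz gives
\[
\Pr_{\text{triple},\,(E_1,B_1),(E_2,B_2)}{[\text{$(D_0,E_0,B_0)$ excellent and both $(E_i,B_i)$ good}]}=\E[X^2]\ge\E[X]^2\ggg\eps^{12}.
\]
On this event the triangle inequality yields $g_{D_0\cup E_1,B_1}\overset{\leq 2\tilde W^3}{\neq}g_{D_0\cup E_2,B_2}$, with both $(D_0\cup E_1,B_1)$ and $(D_0\cup E_2,B_2)$ excellent.

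Next I would identify the sampled objects with those in the statement. The pair $(A_0,B_0):=(D_0\cup E_1,B_1)$ is (exactly) test-distributed with $D_0$ a uniform $cq'n$-subset of $A_0$, since marginally $(D_0,E_1)$ is a uniform ordered pair of disjoint $(cq'n,(1-c)q'n)$-sets and $B_1$ is then a uniform $(q-q')n$-set disjoint from $A_0$. Set $\tilde A_0:=D_0\cup E_2$ and $B:=B_2$; then $|\tilde A_0|=q'n$, $B$ is a uniform $(q-q')n$-set disjoint from $\tilde A_0$, and $A_0\cap\tilde A_0=D_0\cup(E_1\cap E_2)$. The overlap $|E_1\cap E_2|$ is concentrated around a value $\Theta(q'^2n)$, so after conditioning on its typical value --- which costs only an inverse-polynomial factor, negligible against $\eps^{12}$ since $\eps\ge 2^{-n^c}$ --- the intersection $|A_0\cap\tilde A_0|$ equals $c'q'n$ for a fixed $c'=d_0c$ with a suitable absolute constant $d_0$, and $\tilde A_0$ is uniform among $q'n$-sets with that intersection with $A_0$, exactly as required. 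Hence, over the whole sampling, with probability $\ggg\eps^{12}$ the pair $(A_0,B_0)$ is excellent and $g_{A_0,B_0}\overset{\leq 2\tilde W^3}{\neq}g_{\tilde A_0,B}$. An averaging argument (as in the $(\rom{1})/(\rom{2})$ splits used elsewhere) then promotes this to the stated form: since the conditional probability, averaged over excellent $(A_0,B_0)$, is $\ggg\eps^{12}$, at least an $\Omega(\eps^{12})$ fraction of excellent pairs $(A_0,B_0)$ must satisfy $\Pr_{(\tilde A_0,B)}[g_{A_0,B_0}\overset{\leq 2\tilde W^3}{\neq}g_{\tilde A_0,B}]\ggg\eps^{12}$.

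The main obstacle is the distribution bookkeeping in the middle step. Claim~\ref{claim:oldnew_g_consistency} compares $g_{D_0,E_0,B_0}$ only with $g_{D_0\cup E,B}$ for a \emph{fresh}, independent $(E,B)$, never with $g_{D_0\cup E_0,B_0}$ for the triple's own pair (which is not even in the support of $\mathcal{D}(D_0,E_0)$), so one is forced to use two fresh samples and then argue that $(D_0\cup E_1,B_1)$ is genuinely test-distributed and that $(D_0\cup E_2,B_2)$ has the prescribed correlation with it --- including pinning down $|A_0\cap\tilde A_0|$ exactly, which is what fixes $d_0$. One must also be careful that the ``good'' event really supplies excellence of each $(D_0\cup E_i,B_i)$ so that $g_{D_0\cup E_i,B_i}$ is defined, which is why Claim~\ref{claim:commonCons} is invoked alongside Claim~\ref{claim:oldnew_g_consistency}.
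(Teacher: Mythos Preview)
Your proposal is correct and follows essentially the same route as the paper: start from Claim~\ref{claim:oldnew_g_consistency} to get $\E_{(D_0,E_0,B_0)}\E_{(E,B)}[\mathbf{1}_{\text{good}}]\ggg\eps^6$, square and apply Cauchy--Schwarz (Jensen) over the triple to get two independent fresh samples both good with probability $\ggg\eps^{12}$, use the triangle inequality to pass to $g_{D_0\cup E_1,B_1}\overset{\leq 2\tilde W^3}{\neq}g_{D_0\cup E_2,B_2}$, identify $(A_0,B_0)=(D_0\cup E_1,B_1)$ and $(\tilde A_0,B)=(D_0\cup E_2,B_2)$, condition on the typical value of $|E_1\cap E_2|$ to fix $c'=d_0c$, and finish by averaging. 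Your write-up is in fact slightly more careful than the paper's in two places: you explicitly verify that $(D_0\cup E_1,B_1)$ is exactly test-distributed (which is true by the symmetry argument you give), and you flag that excellence of each $(D_0\cup E_i,B_i)$ is needed for $g_{D_0\cup E_i,B_i}$ to be defined --- the paper simply notes this is ``implicit in the event''.
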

	\begin{proof}
		Since $\Omega(\eps^2)$ fraction of the triples $(D_0, E_0, B_0)$ are $(\eps^2/2, 2T, \rr, h, \gamma)$-excellent, we have the following from Claim~\ref{claim:oldnew_g_consistency}.
		\begin{align*}
			\Omega(\eps^6) \leq \E_{(D_0, E_0, B_0)}\E_{(E,B)\sim\mathcal{D}(D_0, E_0)}{\left[\mathbf{1}_{g_{D_0, E_0, B_0} \overset{\leq \tilde{W}^3}{\neq} g_{D_0\cup E, B}}\right]}.
		\end{align*}
		Note that $(D_0\cup E, B)$ being $(\eps^2/8, 4W^{2}, \tilde{\rr}, \tilde{h}, \gamma)$-excellent is implicit in the event.
        Squaring and applying Cauchy-Schwarz we get that
		\begin{align*}
			\Omega(\eps^{12}) &\leq \E_{(D_0, E_0, B_0)}\E_{\substack{(E,B)\sim\mathcal{D}(D_0, E_0)\\ (E',B')\sim\mathcal{D}(D_0, E_0)}}
{\left[\mathbf{1}_{g_{D_0, E_0, B_0} \overset{\leq \tilde{W}^3}{\neq} g_{D_0\cup E, B}} \ \ \mathbf{1}_{g_{D_0, E_0, B_0} \overset{\leq \tilde{W}^3}{\neq} g_{D_0\cup E', B'}}\right]}\\
			& \leq  \E_{(D_0, E_0, B_0)}\E_{\substack{(E,B)\sim\mathcal{D}(D_0, E_0)\\ (E',B')\sim\mathcal{D}(D_0, E_0)}}
                {\left[ \mathbf{1}_{g_{D_0\cup E, B} \overset{\leq 2\tilde{W}^3}{\neq} g_{D_0\cup E', B'}}\right]}
		\end{align*}
		Now letting $A_0 = D_0\cup E'$, $B_0 = B'$, $\tilde{A}_0 = D_0\cup E$ and $B = B$, we see that with probability at least $1-\exp(-\Omega((cq')^2n))$, we have $|E\cap E'| =O(q'^2n)$. Therefore, there exists a constant $d_0$ such that when we condition the above distribution on $|A_0\cap \tilde{A}_0| = d_0cq'n$, the expectation at still least $\Omega(\eps^{12})$. The claim now follows from an averaging argument.
	\end{proof}
	
    \subsubsection{Applying Small-set Expansion}
	Intuitively, Claim~\ref{claim:correlated_g_closeness} asserts that after re-randomizing $B_0$ and perturbing $A_0$,
    the functions $g_{A_0,B_0}$ remain basically the same with noticeable probability. In the next lemma we appeal to small
    set expansion type results to show that there is a single function $g$ such that $g_{A_0,B_0}$ is very close to  $g$ with noticeable
    probability, hence concluding the proof of Theorem~\ref{thm:DPtest} for $m=2$.
	\begin{lemma}
		\label{lemma:global_SSE_arg}
		If $\Pr_{(A_0\cup B_0, A_0\cup B_1)\sim  \mathcal{D}_{q,q'}}{[F[A_0, B_0]|_{A_0} \overset{\leq T}{\neq} F[A_0, B_1]|_{A_0}]}\geq \eps$,
		then there exists a pair $(A^\star_0,B^\star_0)$ which is $(\eps^2/8, 2W^{2}, \tilde{\rr}, \tilde{h}, \gamma)$-excellent such that
		\[
        \Pr_{T\in{[n]\choose qn}}{\left[g_{A^\star_0,B^\star_0}(T) \overset{\leq \tilde{W}^{O(1)}}{\neq} F[T]\right]} \geq \delta,
        \]
		where $\delta =  \eps^{O(\log (1/q')^2)}$.
	\end{lemma}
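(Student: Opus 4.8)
The statement globalizes the local functions $g_{A_0,B_0}$ produced in the previous subsections. The plan is: (i) package the output of Claim~\ref{claim:correlated_g_closeness} as a large set of vertices of the multi-cube graph $\mathcal{G}_n$ all of whose vertices have many ``linked'' neighbours; (ii) invoke the small-set expansion of $\mathcal{G}_n$ (Lemma~\ref{lemma:sse_prem}) to upgrade this to a single global function $g^\star$ agreed with by a heavy cluster $U$ of excellent pairs; (iii) use the local structure (Claim~\ref{claim:localg} and the $\Cons^\star$ sets) to read off agreement of $g^\star$ with a $\delta$-fraction of the table entries $F[T]$.

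\textbf{Step (i): the starting point.} Apply Claim~\ref{claim:correlated_g_closeness} with the perturbation parameter $c'=d_0 c$, so that $\mathcal{G}_n$ is taken with overlap parameter $c'$. This gives a set $V^\star$ of excellent pairs with $\mu_{\mathcal{G}_n}(V^\star)\geq \eps^{O(1)}$ such that every $v\in V^\star$ has a $\gg\eps^{12}$ fraction of $\mathcal{G}_n$-neighbours $v'$ which are excellent and satisfy $\Delta(g_v,g_{v'})\leq 2\tilde{W}^3$; call such an edge \emph{linked}. In particular the linked edges are a $\gg\eps^{O(1)}$ fraction of all edges of $\mathcal{G}_n$.

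\textbf{Step (ii): the clustering via small-set expansion.} The heart of the argument is to show there exist a function $g^\star\colon[n]\to\{0,1\}$ and a set $U$ of excellent pairs with $\mu_{\mathcal{G}_n}(U)\geq \delta_1:=\eps^{O(\log(1/q')^2)}$ and $g^\star=g_{v^\star}$ for some $v^\star\in U$, such that $\Delta(g_v,g^\star)\leq \tilde{W}^{O(1)}$ for all $v\in U$. This is where Lemma~\ref{lemma:sse_prem} enters, and the argument follows the strategy of~\cite{BKMcsp3}. One assumes that \emph{no} Hamming ball of radius $r=\tilde{W}^{O(1)}$ around any $\psi$ contains a set of excellent pairs of $\mathcal{G}_n$-measure $\geq\delta_1$; forms a greedy partition of the excellent pairs, seeded along $V^\star$, into clusters around representatives $g_{v_i}$, each then of measure $<\delta_1$; and observes, via the triangle inequality for Hamming distance together with the $\gg\eps^{12}$ linked-neighbour property, that a $\gg\eps^{O(1)}$ fraction of the edges of $\mathcal{G}_n$ must connect clusters whose representatives lie at Hamming distance $O(\tilde{W}^3)$. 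Summing the small-set expansion bound $\phi_{\mathcal{G}_n}(\sigma)\geq 1-\sigma^{\Omega(1/\log(1/c'q')^2)}$ over the (light) clusters then bounds the total mass of such edges by $\delta_1^{\Omega(1/\log(1/c'q')^2)}$, which is $<\eps^{O(1)}$ for the stated choice $\delta_1=\eps^{O(\log(1/q')^2)}$ — a contradiction. The delicate points are choosing $r$ and the cluster radii so that a linked step (which drifts $g_v$ by at most $2\tilde{W}^3$) genuinely keeps one within a controlled bundle of clusters, and making sure the linked neighbours, which need not lie in $V^\star$, are still charged to clusters by defining the partition on the whole set of excellent pairs rather than only on $V^\star$. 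This is the step I expect to be the main obstacle: turning the qualitative ``$g$'s cluster'' picture into a quantitatively correct statement with only the $\eps^{O(\log(1/q')^2)}$ loss, tracking the greedy partition exactly as in~\cite{BKMcsp3} but with the explicit expansion bound.

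\textbf{Step (iii): reading off the conclusion.} Take $(A_0^\star,B_0^\star):=v^\star\in U$, which is excellent with the parameters demanded by the lemma, so $g_{A_0^\star,B_0^\star}=g^\star$. Fix $v=(A_0,B_0)\in U$. By goodness of $v$ together with Claim~\ref{claim:localg} (in the version matching the excellence parameters here), a $\gg\eps^{2}$ fraction of sets $B\subseteq[n]\setminus A_0$ of size $q''n$ lie in $\Cons^\star_{2W^2}(A_0,B_0)$, and for each such $B$ one has $F[A_0\cup B]|_{B}\overset{\leq W^2}{\neq}g_v(B)\overset{\leq \tilde{W}^{O(1)}}{\neq}g^\star(B)$, while on $A_0$ one has $F[A_0\cup B]|_{A_0}\overset{\leq 2W^2}{\neq}F[A_0,B_0]|_{A_0}=g_v(A_0)\overset{\leq \tilde{W}^{O(1)}}{\neq}g^\star(A_0)$; hence $F[A_0\cup B]\overset{\leq \tilde{W}^{O(1)}}{\neq}g^\star(A_0\cup B)$. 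Averaging over $v\in U$ and over the fresh $B$, and using that $A_0\cup B$ (with $|A_0|=q'n$, $|B|=q''n$ disjoint) is uniform on $\binom{[n]}{qn}$ once the auxiliary $B_0$, whose sole role is membership of $v$ in $U$, is marginalised out, we get
\[
\Pr_{T\in\binom{[n]}{qn}}{\left[F[T]\overset{\leq \tilde{W}^{O(1)}}{\neq}g_{A_0^\star,B_0^\star}(T)\right]}\;\geq\;\mu_{\mathcal{G}_n}(U)\cdot\Omega(\eps^2)\;\geq\;\eps^{O(\log(1/q')^2)},
\]
which is the claimed bound with $\delta=\eps^{O(\log(1/q')^2)}$.
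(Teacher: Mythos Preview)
Your Step~(ii) is where the argument breaks. A greedy Hamming-ball clustering does not let small-set expansion be applied as you describe: a linked edge $(v,v')$ with $\Delta(g_v,g_{v'})\leq 2\tilde{W}^3$ need not stay inside a single cluster, and there is no a priori bound on how many clusters have representatives within Hamming distance $O(\tilde{W}^3)$ of a given one (in $\{0,1\}^n$ this count can be exponential). Consequently, ``summing the SSE bound over the light clusters'' only controls the fraction of $\mathcal{G}_n$-edges that stay \emph{inside one} cluster, not the fraction of linked edges, and no contradiction results. You flag this step as the main obstacle; it is, and the sketch does not resolve it.

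The paper replaces this with a coding trick that your plan is missing. Sample a uniformly random $S\subseteq[n]$ of size $n/W'$ with $W'=2\tilde{W}^3$; if $\Delta(g_v,g_{v'})\leq W'$ then with probability $\Omega(1)$ the set $S$ avoids every disagreement coordinate, so $g_v|_S=g_{v'}|_S$ \emph{exactly}. One thus obtains, for each such $S$, a genuine partition of pairs by the value $g_{\cdot}(S)\in\{0,1\}^{|S|}$ in which linked edges land in the \emph{same} part; now Lemma~\ref{lemma:sse_prem} applies verbatim and forces some part $\mathcal{S}_{\tau^\star(S)}$ to have measure $\geq\eta=\eps^{O(\log(1/q')^2)}$. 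This yields, for a typical $S$, that two \emph{independent} pairs lie in the same part with probability $\geq\eta^2$; averaging over $S$, swapping the order of expectation, and invoking Claim~\ref{claim:the_following_claim} converts equality on random $S$'s back into $\Delta(g_{A_0,B_0},g_{A'_0,B'_0})\leq W'^{O(1)}$ with probability $\gg\eps^{O(1)}\eta^2$ over independent pairs. Fixing one side as $(A_0^\star,B_0^\star)$ then lets your Step~(iii) go through essentially as written (the paper phrases it slightly differently, sampling $T$ uniformly first and then decomposing $T=A\cup B$, which avoids your marginalisation remark).
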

	\begin{proof}
		We first note that if two functions $f, g:[n]\rightarrow \{0,1\}$ differ at $O(a)$ locations and $a\leq \sqrt{n}$,
        then for a random subset $S\subseteq [n]$ of size $n/a$ the probability that $f(S) = g(S)$ is at least $\Omega(1)$.
        Using this fact and Claim~\ref{claim:correlated_g_closeness}, and by letting $W' = 2\tilde{W}^3$, we have the following
		\[
        \E_{S\subseteq[n] \mid |S|=\frac{n}{W'}}
        \left[ \E_{(A_0, B_0), (\tilde{A_0}, B)}\left[ \mathbf{1}_{g_{A_0, B_0}(S) = g_{\tilde{A_0}, B}(S)}\right] \right]
        \ggg \eps^{\Theta(1)}.
        \]
		where $(\tilde{A_0}, B)$ is distributed as in Claim~\ref{claim:correlated_g_closeness}.
Note that the event $ \mathbf{1}_{g_{A_0, B_0}(S) = g_{\tilde{A_0}, B}(S)}$ implicitly implies that the pairs $(A_0, B_0), (\tilde{A_0}, B)$ are excellent.
By an averaging argument, at least $\Omega(\eps^{\Theta(1)})$ fraction of the sets $S$ are such that
		\[
        \E_{(A_0, B_0), (\tilde{A_0}, B)}\left[ \mathbf{1}_{g_{A_0, B_0}(S) = g_{\tilde{A_0}, B}(S)}\right] \ggg \eps^{\Theta(1)}.
        \]
		Given such a set $S$, we define a partition of ${[n] \choose qn}$ based on the value $g_{A,B}(S)$. In other words, we have parts identified by strings in
$\{0,1\}^{n/W'^2}$;  $(A,B)$ and $(A',B')$ belong to the part $\mathcal{S}_\tau$ if $g_{A,B}(S) = g_{A',B'}(S) = \tau$.
We claim that there exists a $\tau$ such that $|\mathcal{S}_\tau| = \eta {n \choose qn}$, for some $\eta = \eta(\eps, q') = \eps^{O(\log (1/q')^2)}$.
		
		Consider the graph $\mathcal{G}_n$ induced on the set of vertices $\{(A, B) \mid |A| = q'n, |B|=(q-q')n\}$ as follows: a random neighbor
$(A', B')$ of $(A, B)$ in this graph is sampled conditioned on the fact that $A'$ is distributed uniformly conditioned on $|A'| = q'n$ , $|A\cap A'| = c'q'n$ and
$B'\subseteq[n]\setminus A'$ is a uniformly random set of size $(q-q')n$. Using the small set expansion property of $\mathcal{G}_n$ from Lemma~\ref{lemma:sse_prem},
if all the parts were of size at most $\delta {n \choose qn}$, then
\[\Pr_{(A_0, B_0), (\tilde{A_0}, B)}{[(A_0, B_0), (\tilde{A_0}, B) \in \mathcal{S}_\tau\ \text{ for some } \tau ]}\leq \delta^{O(1/\log(1/q')^2)}.
\]
		We conclude that there must be a part $\tau^{\star}(S)$ whose fractional size is at least $\eta$.
        Therefore, for any such $S$ we get that
        \[
        \E_{(A_0, B_0), (A'_0, B'_0)}\left[ \mathbf{1}_{g_{A_0, B_0}(S) = g_{{A'_0}, B'_0}(S)}\right]
        \geq \left(\E_{(A_0, B_0), (A'_0, B'_0)}\left[ \mathbf{1}_{g_{A_0, B_0}(S) =\tau^{\star}(S)}\right]\right)^2
        \geq \eta^2.
        \]
        Taking expectation over $S$ yields that
		\[
        \E_{S\subseteq[n] \mid |S|=\frac{n}{\zeta'^2}}\left[ \E_{(A_0, B_0), (A'_0, B'_0)}\left[ \mathbf{1}_{g_{A_0, B_0}(S) = g_{{A'_0}, B'_0}(S)}\right] \right]
        \ggg
        \eps^{\Theta(1)}\eta^2.
        \]
		This implies,
		\[
        \E_{(A_0, B_0), (A'_0, B'_0)}\left[ \E_{S\subseteq[n] \mid |S|=\frac{n}{\zeta'^2}}\left[\mathbf{1}_{g_{A_0, B_0}(S) = g_{{A'_0}, B'_0}(S)}\right] \right]
        \ggg
        \eps^{\Theta(1)}\eta^2,
        \]
        and so by an averaging argument for at least $\Omega(\eps^{\Theta(1)} \eta^2)$ of pairs $(A_0, B_0), (A'_0, B'_0)$ we have that
        \[
        \E_{S\subseteq[n] \mid |S|=\frac{n}{\zeta'^2}}\left[\mathbf{1}_{g_{A_0, B_0}(S) = g_{{A'_0}, B'_0}(S)}\right]\ggg
        \eps^{\Theta(1)}\eta^2.
        \]
		Using Claim~\ref{claim:the_following_claim} we get that
		\[
    \Pr_{(A_0, B_0), (A'_0, B'_0)}\left[ {g_{A_0, B_0} \overset{\leq W'^{O(1)}}{\neq} g_{{A'_0}, B'_0}}\right] \ggg
        \eps^{\Theta(1)}\eta^2.
    \]
		This means that there exists an excellent pair $(A^\star_0, B^\star_0)$ such that
		\[
\Pr_{(A'_0, B'_0)}\left[ {g_{A^\star_0, B^\star_0}\overset{\leq W'^{O(1)}}{\neq} g_{{A'_0}, B'_0}}\right] \ggg
        \eps^{\Theta(1)}\eta^2.
\]
		Now consider selecting a random set $T$ of size $qn$ and select a random subset $A$ of $T$ of size $q'n$ and let $B= T\setminus A$.
Select a random set $B'\subseteq [n]\setminus A$. Using the above inequality, with probability $\Omega(\eps^{\Theta(1)} \eta^2)$,
$g_{A^\star_0, B^\star_0}\overset{\leq W'^{O(1)}}{\neq} g_{A, B'}$ and with probability at least $\Omega(\eps^2)$,
$B\in \Cons^\star_{2W^2}(A, B')$ and hence  $g_{A, B'}(T) \overset{\leq W'}{\neq}F[T]$. Combining all these events, we have
		\[
\Pr_{T\in{[n]\choose qn}}{\left[g_{A^\star_0,B^\star_0}(T) \overset{\leq W'^{O(1)}}{\neq} F[T]\right]}
\ggg \eps^{\Theta(1)}\eta^2\cdot \eps^2 = \eps^{O(\log (1/q')^2)},
\]
		and the claim follows.
	\end{proof}

	\subsubsection{Extending the Result to Large Alphabets}
	\label{sec:general_dp}
	In this section, we prove the direct product result with similar conclusion but for larger alphabet.

	\begin{thm}
		[Restatement of Theorem~\ref{thm:DPtest}]
         There exists $c>0$ such that the following holds for sufficiently large $n\in\mathbb{N}$ and $\eps\geq 2^{-n^{c}}$. Fix an alphabet $[m]$ and  $c_0>0$ be any constant. For all $\eps>0$, $0<q'< q<1$ and $T = \left(\frac{\log(1/\eps)}{q'}\right)^{c_0}$,  suppose that $F: {[n]\choose qn}\rightarrow [m]^{qn}$ satisfies
		\[
		\Prob{(S_1,S_2)\sim \mathcal{D}_{q,q'}}{F[S_1]|_{S_1\cap S_2} \overset{\leq T}{\neq } F[S_2]|_{S_1\cap S_2}}\geq \eps.
		\]
		Then there exists a function $g: [n]\rightarrow[m]$ such that for at least an $\eps^{O(\log (1/q')^2)}$ fraction of $S\in {[n] \choose qn}$, we have $|\{ i\in S\mid F[S]_i\neq  g(i)\}| \leq  \left(\frac{T\log{(1/\eps)}}{q'}\right)^{O(1)}$.\end{thm}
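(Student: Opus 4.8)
The statement is the large-alphabet version of Theorem~\ref{thm:DPtest}, whose $m=2$ case was established in Sections~\ref{sec:dp_lst}--\ref{lemma:global_SSE_arg} (culminating in Lemma~\ref{lemma:global_SSE_arg}). The plan is to reduce the general $[m]$ case to the binary case essentially black-box, by encoding each symbol of $[m]$ as a short binary string and tensoring up the alphabet. Concretely, I would fix an injection $\iota\colon [m]\hookrightarrow \{0,1\}^{\ell}$ with $\ell = \lceil\log_2 m\rceil$ which is \emph{balanced} in the sense that for any two distinct $a,a'\in[m]$ the strings $\iota(a),\iota(a')$ differ in at least one and at most $\ell$ coordinates --- any injection works, with the trivial bound that a disagreement of symbols costs between $1$ and $\ell$ bit-disagreements. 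Given the table $F\colon \binom{[n]}{qn}\to [m]^{qn}$, define a new table $\hat F\colon \binom{[n\ell]}{qn\ell}\to \{0,1\}^{qn\ell}$ on the ground set $[n]\times[\ell]$, where we only ever query $\hat F$ on sets of the form $S\times[\ell]$ for $S\in\binom{[n]}{qn}$, and set $\hat F[S\times[\ell]]$ to be the bit-string obtained by replacing each symbol $F[S]_i$ by its encoding $\iota(F[S]_i)\in\{0,1\}^\ell$. (On sets not of this ``block'' form we may define $\hat F$ arbitrarily, e.g. all zeros; this only affects a measure-zero-for-our-purposes part of the binary tester's domain, and more carefully one runs the binary argument restricted to the sub-distribution on block sets — see below.)

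\textbf{Key steps.} First I would observe that a pass of the $(q,q',T)$ agreement test on $F$ translates to a pass of the $(q,q',\ell T)$ agreement test on $\hat F$ run on block sets: if $F[S_1]|_{S_1\cap S_2}$ and $F[S_2]|_{S_1\cap S_2}$ differ in at most $T$ coordinates, then the corresponding bit-encodings differ in at most $\ell T$ coordinates of $(S_1\cap S_2)\times[\ell]$. Here the subtlety that $\hat F$ lives on $[n\ell]$ but we only sample block sets is handled exactly as in the main construction: one re-runs the local-structure and global-structure arguments of Sections~\ref{sec:dp_lst}--\ref{lemma:global_SSE_arg} over the sub-distribution supported on block sets, noting that all the sampling lemmas (Lemma~\ref{lemma:sampler}), the small-set-expansion input (Lemma~\ref{lemma:sse_prem}), and the Chernoff/averaging arguments go through verbatim since a uniformly random block subset of $[n]\times[\ell]$ of density $q$ is just a uniformly random $q$-dense subset of $[n]$ in disguise. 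Since $\ell = \lceil\log_2 m\rceil$ is a constant (as $m$ is fixed) and $T' := \ell T = \ell\left(\frac{\log(1/\eps)}{q'}\right)^{c_0} = \left(\frac{\log(1/\eps)}{q'}\right)^{O_m(c_0)}$ is still of the required polynomial form, Lemma~\ref{lemma:global_SSE_arg} (its block-set version) yields a binary function $\hat g\colon [n]\times[\ell]\to\{0,1\}$ such that for at least $\eps^{O(\log(1/q')^2)}$ fraction of $S$, the block encoding $\hat F[S\times[\ell]]$ agrees with $\hat g$ on all but $\tilde W^{O(1)} = \left(\frac{T\log(1/\eps)}{q'}\right)^{O(1)}$ coordinates of $S\times[\ell]$.

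\textbf{Decoding back.} The last step is to decode $\hat g$ into $g\colon[n]\to[m]$. For each $i\in[n]$, look at the bit-string $\hat g(i,\cdot)\in\{0,1\}^\ell$; if it lies in the image $\iota([m])$, set $g(i) = \iota^{-1}(\hat g(i,\cdot))$, otherwise set $g(i)$ arbitrarily (say $g(i)=0$). I claim that on the good fraction of $S$, $g$ and $F[S]$ agree on all but $\tilde W^{O(1)}$ coordinates: indeed, if $\hat F[S\times[\ell]]$ and $\hat g$ agree on $(S\setminus E)\times[\ell]$ for some exceptional set $E\subseteq S$ with $|E|\cdot\ell \le $ (the total number of bit-disagreements) $\le \tilde W^{O(1)}$, hence $|E|\le \tilde W^{O(1)}$, then for every $i\in S\setminus E$ we have $\hat g(i,\cdot) = \iota(F[S]_i)\in\iota([m])$ so $g(i) = F[S]_i$ by injectivity of $\iota$. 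This gives exactly the desired conclusion with agreement parameter $\left(\frac{T\log(1/\eps)}{q'}\right)^{O(1)}$ (absorbing the constant $\ell$) on an $\eps^{O(\log(1/q')^2)}$ fraction of $S$. \textbf{The main obstacle} I anticipate is purely bookkeeping: making sure that restricting the entire machinery of Sections~\ref{sec:dp_lst}--\ref{lemma:global_SSE_arg} to the block-set sub-distribution does not break any of the sampler or small-set-expansion estimates (in particular that the graph $\mathcal{G}_{n\ell}$ restricted to block vertices is still the graph $\mathcal{G}_n$ up to the identification $[n\ell]\cong[n]\times[\ell]$, so Lemma~\ref{lemma:sse_prem} applies with parameter $q'$ unchanged) — but since block sets are in measure-preserving bijection with $q$-dense subsets of $[n]$, every step survives, and no genuinely new idea is needed beyond the $m=2$ case.
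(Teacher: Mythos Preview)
Your approach is correct but takes a different route from the paper. The paper observes that the only place binary-ness is used is in the local structure lemma (Claim~\ref{claim:localg}), where the majority vote guarantees that at least half of $B'\in\Cons_T^x$ agree with $g(x)$; it then reproves just this one claim for general $[m]$ (Claim~\ref{claim:localg_general}) via the random-projection trick of~\cite{ImpagliazzoKW12}: pick independent random hashes $f_i\colon[m]\to\{0,1\}$, apply them coordinate-wise to get a binary table $G_f$, and show that whenever plurality-$g$ disagrees with $F$ at a coordinate then with probability $\ge 1/4$ over the randomness the binary majority $g_f$ disagrees with $G_f$ there too --- hence the bad set for plurality is at most twice the (small) bad set for the random binary table. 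The global structure argument (modified test, small-set expansion) is noted to work verbatim for any alphabet, so nothing else changes. Your approach instead uses a \emph{fixed} encoding $\iota$ and bitwise majority; the ``blow up to $[n\ell]$ then restrict to block sets'' is really just notation, since block sets are in bijection with subsets of $[n]$ --- what you are actually doing is running the argument on ground set $[n]$ with $\ell$-bit answers and per-bit majority, which is a valid alternative to plurality (each bit is binary so Claim~\ref{claim:localg_c} holds per bit, and the contradiction step in Claim~\ref{claim:localg} goes through with bit-disagreement counts). What you gain is a deterministic reduction and a clean decoding step; what the paper's route gains is modularity --- a one-paragraph overlay on the binary case, rather than the re-verification you flag as ``bookkeeping'' that every ingredient of the local-structure proof survives in the block setting.
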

	
	The proof of this theorem follows the same lines as the proof of the binary case. In the binary case,
    the only place we used the fact that the alphabet is binary is when deriving the local structure, i.e.\,  Lemma~\ref{lemma:localg}.
    Given such a local structure for larger alphabet the rest of the proof goes through as is.
    Thus, to prove the case $m\geq 3$ it suffices to prove a local structure lemma in this setting, and we show it via a reduction to the binary case (Lemma~\ref{lemma:localg}).
    We remark that this type of argument already appeared in ~\cite{ImpagliazzoKW12}, and we reproduce it here for the sake of completeness.

	The definitions of goodness and excellence still makes sense, and we use the same parameters as in the binary case. It is also easy to observe that Claim~\ref{claim:good} and Claim~\ref{claim:excellent} follow in this setting as is. Fix an excellent pair $(A_0, B_0)$, we can define the function $g_{A_0,B_0}: [n] \rightarrow [m]$ as follows:
	\[
        g_{A_0,B_0}(x) := \mathop{\mathsf{Plurality}}_{B\in \Cons_{T}(A_0, B_0) \mid B\ni x} F[A_0,B]|_{x}.
    \]
	We have the following claim, which is a substitution for Lemma~\ref{lemma:localg}:	
	\begin{claim}
		\label{claim:localg_general}
		Fix an alphabet $[m]$ and $c_0>0$ be any constant. For every $\eps>0$,  and $q, q'\in (0,1)$, suppose a table $F: {[n]\choose qn} \rightarrow [m]^{qn}$ passes the \maintest with parameters $(q, q', \eta)$ with probability at least $\eps$. 	If $(A_0,B_0)$ is excellent then
		\[
            \Pr_{B\in \Cons_T(A_0,B_0)}{[F[A_0, B]|_{B} \overset{>10W^2}{\neq} g_{A_0,B_0}(B)]} \leq 2\nu.
        \]
	\end{claim}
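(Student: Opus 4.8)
The plan is to reduce the large-alphabet local structure claim to the binary one, Lemma~\ref{lemma:localg}, via the standard ``random bucketing'' trick from~\cite{ImpagliazzoKW12}. The idea is that if the table $F\colon \binom{[n]}{qn}\to[m]^{qn}$ passes \maintest\ with probability at least $\eps$, then for many partitions of the alphabet $[m]$ into two buckets, the induced binary table still passes the binary version of \maintest\ with non-negligible probability, and we can then apply the binary local structure to each such bucketing and ``stitch'' the resulting binary functions back together into a single $[m]$-valued function that coincides with $g_{A_0,B_0}$ as defined by plurality.

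Concretely, first I would fix a random hash $\pi\colon [m]\to\{0,1\}$, obtained by placing each symbol of $[m]$ independently and uniformly into one of two buckets, and define $F^{\pi}\colon \binom{[n]}{qn}\to\{0,1\}^{qn}$ by $F^{\pi}[S] = \pi\circ F[S]$ (coordinate-wise). Since $\pi$ can only merge symbols, we have $F^{\pi}[S_1]|_{S_1\cap S_2}\overset{\le T}{\ne} F^{\pi}[S_2]|_{S_1\cap S_2}$ whenever $F[S_1]|_{S_1\cap S_2}\overset{\le T}{\ne} F[S_2]|_{S_1\cap S_2}$, so $F^{\pi}$ passes the binary \maintest\ with probability at least $\eps$ for every $\pi$. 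Thus by Lemma~\ref{lemma:localg}, for each $\pi$ and each $(\eps/2,T,\rr,h,\gamma)$-excellent pair $(A_0,B_0)$ (note excellence of $(A_0,B_0)$ with respect to $F$ implies excellence with respect to $F^{\pi}$, since merging symbols can only increase the ``Cons'' sets, hence only increase goodness; one has to check this does not hurt the excellence condition, which follows because the bad event there is an inequality event that only shrinks under $\pi$), the binary plurality function $g^{\pi}_{A_0,B_0}$ satisfies $\Pr_{B\in\Cons_T}[F^{\pi}[A_0,B]|_B \overset{>W^2}{\ne} g^{\pi}_{A_0,B_0}(B)]\le\nu$. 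The key observation is that $g^{\pi}_{A_0,B_0}(x)$, being the majority of $\pi(F[A_0,B]|_x)$ over $B\in\Cons^x_T$, equals $\pi(g_{A_0,B_0}(x))$ whenever the plurality symbol for $g_{A_0,B_0}(x)$ carries strictly more than half the mass after applying $\pi$; with a constant number (say $O(\log m)$) of independent hashes $\pi_1,\dots,\pi_\ell$ one separates all $m$ symbols, so that a coordinate $x$ with $g_{A_0,B_0}(x)=\sigma$ and $g^{\pi_j}_{A_0,B_0}(x)=\pi_j(\sigma)$ for all $j$ is ``resolved'' and on all such $x$ the value $g_{A_0,B_0}(x)$ is genuinely the plurality.

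Then I would take a union bound over $j=1,\dots,\ell$: with probability over the $\pi_j$'s it holds for all $j$ simultaneously that $\Pr_{B\in\Cons_T}[F^{\pi_j}[A_0,B]|_B \overset{>W^2}{\ne} g^{\pi_j}_{A_0,B_0}(B)]\le\nu$, hence with probability at least $1-\ell\nu\ge 1-2\nu$ over $B\in\Cons_T(A_0,B_0)$ we have that $F^{\pi_j}[A_0,B]$ agrees with $g^{\pi_j}_{A_0,B_0}$ on all but $W^2$ of the coordinates of $B$, for every $j$. On the (at least $1-2\nu$ fraction of) such $B$, the total number of coordinates $x\in B$ on which \emph{some} $\pi_j$ disagrees is at most $\ell W^2$. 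A coordinate $x\in B$ on which $F[A_0,B]|_x\ne g_{A_0,B_0}(x)$ forces either $\pi_j(F[A_0,B]|_x)\ne\pi_j(g_{A_0,B_0}(x))$ for some $j$ (which, since the $\pi_j$'s separate symbols, happens for at least one $j$ unless the two symbols are equal), or else the value $g_{A_0,B_0}(x)$ is ``unresolved'' — but the number of unresolved coordinates of $B$ is also bounded by $\sum_j$ (number of coordinates where $g^{\pi_j}$ disagrees with $F^{\pi_j}$) $\le \ell W^2$ by the same argument. Hence $|\{x\in B: F[A_0,B]|_x\ne g_{A_0,B_0}(x)\}|\le O(\ell W^2)\le 10 W^2$ for a suitable choice of constants (absorbing the $\ell=O(\log m)$ factor into the constant, using $m$ constant). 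Putting $\nu=\eps^{10}$ as before and choosing the $\pi_j$'s carefully, this gives exactly the statement of Claim~\ref{claim:localg_general} (after possibly tweaking the constant in the exponent of $W$, which we have freedom to do).

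The main obstacle I expect is the bookkeeping in the stitching step: one must argue that ``resolving'' a coordinate via the hashes $\pi_1,\dots,\pi_\ell$ genuinely recovers the plurality vote $g_{A_0,B_0}(x)$, and that the error set on $B$ caused by unresolved coordinates is controlled by the same $W^2$-type bound. This requires being slightly careful about the definition of plurality versus majority-after-hashing: if the plurality symbol $\sigma$ has mass just below $1/2$, a single hash might not detect it, so one needs either enough independent hashes to amplify, or a direct argument that for all but few coordinates the plurality margin is large (which can be extracted from the fact that $F[A_0,B]|_B$ agrees with $g_{A_0,B_0}$ on $\ge 1-O(\nu)$ fraction of $B$'s, hence the plurality symbol at a typical $x$ carries $\ge 1-O(\eps)$ fraction of the mass). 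I would handle this by first showing, as in the binary proof, that for an excellent pair the plurality margin is $1-o(1)$ on all but $O(W^2/q'')$ coordinates, which makes every hash correctly resolve those coordinates, and the remaining $O(W^2/q'')$ coordinates are simply added into the final error bound. The rest of the argument — that the downstream sections (global structure via \test, consistency between local functions, the small-set expansion step) go through verbatim once the local structure lemma is in hand — requires no changes, since those arguments never inspect the alphabet size.
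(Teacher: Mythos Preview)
Your overall strategy---reduce to the binary case by hashing the alphabet and invoking Lemma~\ref{lemma:localg}---is the same as the paper's, but the implementation differs in a way that matters. The paper does not use a handful of \emph{global} hashes $\pi_1,\dots,\pi_\ell\colon[m]\to\{0,1\}$ as you do; instead it chooses \emph{per-coordinate} independent hashes $f_1,\dots,f_n\colon[m]\to\{0,1\}$ and forms the single binary table $G_f[S]_i=f_i(F[S]_i)$. Excellence of $(A_0,B_0)$ passes to $G_f$ for every realization of the $f_i$'s, so Lemma~\ref{lemma:localg} applies and gives a bad set $Z_f$ of density at most $\nu$ in $\Cons_T$. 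The heart of the paper's argument is then a pointwise claim: for each fixed $x$ and each $u\neq g_{A_0,B_0}(x)$, with probability at least $1/4$ over the choice of $f_x$ alone one has $g_f(x)\neq f_x(u)$. (The trick is to first set $f_x$ to $\perp$ on both $u$ and the plurality symbol, compute the residual majority $b$, and then note that with probability $1/4$ the two random bits land as $f_x(g_{A_0,B_0}(x))=b$ and $f_x(u)=1-b$, in which case the plurality symbol tips the majority to $b=g_f(x)$.) Because the $f_x$ are independent across coordinates, for any $B\in Z$ (i.e.\ with more than $10W^2$ disagreements with $g_{A_0,B_0}$) a Chernoff bound over the $\geq 10W^2$ disagreeing coordinates gives $B\in Z_f$ with probability $1-2^{-\Omega(W^2)}$. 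Linearity of expectation then yields $\tfrac12|Z|\leq \mathbb{E}|Z_f|\leq \nu|\Cons_T|$, hence the density of $Z$ is at most $2\nu$.

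The gap in your route is precisely the ``unresolved coordinates'' step you flag. With a global hash $\pi$, the identity $g^{\pi}_{A_0,B_0}(x)=\pi(g_{A_0,B_0}(x))$ can fail whenever the plurality symbol at $x$ has mass below $1/2$, and you need to bound the number of such $x$ lying in $B$. Your proposed bound---that unresolved $x\in B$ are captured by the sets where $g^{\pi_j}$ disagrees with $F^{\pi_j}$---does not follow: being unresolved means $g^{\pi_j}(x)\neq\pi_j(g_{A_0,B_0}(x))$, which says nothing about $F^{\pi_j}[A_0,B]|_x$. And showing directly that the plurality margin is $1-o(1)$ on all but $O(W^2)$ coordinates is essentially the large-alphabet local structure you are trying to prove. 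The per-coordinate hashing in the paper's proof sidesteps this entirely: the $\tfrac14$-probability claim holds regardless of the plurality margin, and independence across coordinates lets Chernoff do the rest.
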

	\begin{proof}
		Let $Z\subseteq \Cons_T(A_0,B_0)$ be the such sets $B$ such that $g_{(A_0, B_0)}(B) \overset{>10W^2}{\neq} F[A_0, B]|_{B}$. Select $n$ random functions $f_1, f_2, \ldots, f_n: [m]\rightarrow \{0,1\}$. Define a table $G_f:{[n]\choose qn}\rightarrow \{0,1\}^{qn}$ as follows: For a given tuple $S = (i_1, i_2, \ldots, i_{qn}) \in {[n]\choose qn}$, if $F(S) = (a_1, a_2, \ldots, a_{qn})$ then set
\[
G_f(S) = (f_{i_1}(a_1), f_{i_2}(a_2),, \ldots, f_{i_{qn}}(a_{qn})).
\]
The reason for doing this is that whenever sets $(A, B)$ and $(A, B')$ agree on $A$ w.r.t. the table $F$, then they also agree on $A$ w.r.t. the map $G_f$. Therefore, for any fixed set of functions $f_i$'s, if $(A_0, B_0)$ is excellent with respect to the table $F$ then, the same pair $(A_0, B_0)$ is also excellent with respect to the map $G_f$. Therefore, by  Lemma~\ref{lemma:localg}, there is a function $g_f$ defined by the majority votes from $\Cons_T$ such that for at most a $\nu$ fraction of the sets in $B\in \Cons_T$, we have $G_f[A_0, B] \overset{\geq W^2}{\neq} g_f(B)$.
Let $Z_f$ be thecollection of such sets in  $B\in \Cons_T$. We will show that if $B\in Z$, then almost surely $B\in Z_f$. Since we know that the density of $Z_f$ in $\Cons_T$ is at most $\nu$, this will show that the density of $Z$ in $\Cons_T$ is also roughly $\nu$ and this will conclude the proof.
		
		Towards showing this, fix a $B$, fix a $x\in B$ and consider a $u\in [m]$ such that $u\neq g_{A_0, B_0}(x)$.
        We first show that with probability at least $1/4$ it holds that $g_f(x) \neq f_x(u)$. Indeed, let us change the value of
         $f_x$ at $u$ and at $g_{A_0, B_0}(x)$ for a moment to both be $\perp$, and  define
		\[
        b =  \mathop{\mathsf{Majority}}_{\substack{B\in \Cons_T(A_0, B_0) \mid B\ni x\\ G_f[A_0,B]|_{x}\neq \perp}} G_f[A_0,B]|_{x}.
        \]
		Now consider defining $f_x$ once again on $u$ and $g_{A_0, B_0}(x)$ by uniform $\{0, 1\}$ independently chosen assignments, and the affect of that
        on the value of $b$. With probability $1/4$, $f_x$ maps $g_{A_0, B_0}(x)$ to $b$ and $u$ to $1 -b$, in which case as $g_{A_0, B_0}(x)$ is more popular than $u$,
        the majority value above will be unchanged and equal to $g_f (x)$. In this case we get that $g_f(x) \neq f_x(u)$.
		
		Fix $B\in Z$ so that $g_{(A_0, B_0)}(B) \overset{>10W^2}{\neq} F[A_0, B]|_{B}$. On each coordinate $x\in B$ where the disagreement occurs, letting
        $u =  F[A_0, B]|_{B}(x)$ we have $g_f(x) \neq f_x(u)$ with probability at least $1/4$. However, $G_f[A_0, B]|_x = f_x(u)$ by definition and hence
        $g_f(x) \neq G_f[A_0, B]|_x$ with probability at least $1/4$. For each coordinate $x\in B$ of disagreement these events are independent (as $f_x$ are independently chosen), so by Chernoff's bound with probability $1-2^{-\Omega(W^2)}$ the function $g_f$ disagrees with $ G_f[A_0, B]|_B$ on at least $2W^2$ coordinates, in which case
        $B\in Z_f$.

        By linearity of expectation we conclude that the expected density of $Z_f$ inside $\Cons_T$ is at least half the density of $Z$;
        thus, as the density of $Z_f$ in $\Cons_T$ is at most $\nu$ it follows that the density of $Z$ in $\Cons_T$ is at most $2\nu$ and the proof is concluded.
	\end{proof}

	\subsection{Direct Product Testing: From Sets to Product Distributions}
	\label{sec:product_1}
	
	In this section, we move from the uniform setting to a product setting. Consider the $q$-biased measure over $P([n])$, i.e. $\mu_q^{\otimes n}(A) = q^{\card{A}}(1-q)^{n-\card{A}}$, and let
	$G\colon (P[n], \mu_q^{\otimes n})\to[m]^{\leq n}$ be an assignment that to each $A\in P([n])$ assigns a string $G[A]\in [m]^{|A|}$
	in a {\it locally consistent manner}. Namely, for $\alpha\in (0,1)$, consider the distribution $\mathcal{D}_{q,\alpha}$
	over $A, A'\subseteq [n]$ that results from by taking, for each $i\in [n]$
	independently, $i$ to in $A\cap A'$ with probability $\alpha q$, to be in $A\setminus A'$ with probability $(1-\alpha)q$ and to be in $A'\setminus A$ with
	probability $(1-\alpha)q$. Suppose that
	\[
	\Prob{(A,A')\sim \mathcal{D}_{q,\alpha}}{G[A]|_{(A\cap A')}  \overset{\leq T}{\neq}  G[A']|_{(A\cap A')}}\geq \eps.
	\]
	Using Theorem~\ref{thm:DPtest}, we show that $G$ must be correlated with a global $S\in [m]^n$.

	\begin{corollary}\label{cor:DP}
        For all $C, c_0>0$ there is $c>0$ such that the following holds for sufficiently large $n\in\mathbb{N}$ and $\eps\geq 2^{-n^{c}}$.

		Fix the alphabet $[m]$, $q\in (0,1)$, $T=\left(\frac{\log(1/\eps)}{q}\right)^{c_0}$ and suppose that $0.9>\alpha\geq\frac{1}{\log^C(1/\eps)}$.
		If $G\colon (P[n], \mu_q^{\otimes n})\to [m]^{\leq n}$ satisfies
		\[
		\Prob{(A,A')\sim \mathcal{D}_{q,\alpha}}{G[A]|_{(A\cap A')}  \overset{\leq T}{\neq} G[A']|_{(A\cap A')}}\geq \eps.
		\]
		Then there exists a string $S\in [m]^n$ such that $\Prob{A\sim_{q}[n]}{\Delta(G[A], S|_{A}) \leq r}\geq \delta$, where $\delta = \eps^{O(\log (1/q)^2)}$
        and $r=\left(\frac{\log{(1/\eps)}}{q}\right)^{O(1)}$.
	\end{corollary}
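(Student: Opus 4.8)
The goal is to derive Corollary~\ref{cor:DP} from the uniform-size direct product theorem, Theorem~\ref{thm:DPtest}, via the standard ``go to infinity and come back'' trick already outlined in the overview. First I would fix a large ambient size $N = \omega(n^4)$ (say $N = n^5$) and define the lifted assignment $\tilde{G}\colon \binom{[N]}{qN}\to [m]^{qN}$ by setting, for $S\in\binom{[N]}{qN}$,
\[
\tilde{G}[S]\big|_{S\cap[n]} = G[S\cap[n]],
\qquad
\tilde{G}[S]\big|_{S\setminus[n]} = 0^{|S\setminus[n]|}.
\]
By Claim~\ref{claim:sd_prod_set}, for a uniformly random $S\in\binom{[N]}{qN}$ the distribution of $A := S\cap[n]$ is $e^{-n}$-close in statistical distance to the $q$-biased measure $\mu_q^{\otimes n}$ on $P([n])$. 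The key point is that the test distribution $\mathcal{D}_{q,q'}$ of \maintest run on $\tilde{G}$ (with an appropriate choice $q'$) projects, on the $[n]$-coordinates, to something close to $\mathcal{D}_{q,\alpha}$ run on $G$.

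\textbf{Matching the test distributions.} The parameter bookkeeping is the main thing to get right. In \maintest on $\binom{[N]}{qN}$ we sample $A_0\cup B_0$ of size $qN$ with $|A_0| = q'N$, and then a fresh $B_1\subseteq[N]\setminus A_0$ of size $q''N$ with $q'' = q - q'$. The two queried sets $S_1 = A_0\cup B_0$ and $S_2 = A_0\cup B_1$ have intersection exactly $A_0$, of size $q'N$, and each coordinate of $[N]$ lands in $S_1\cap S_2$ with probability $q'$, in $S_1\setminus S_2$ with probability $q''$, and in $S_2\setminus S_1$ with probability $q''$ (up to the usual $e^{-\Omega(N)}$ fluctuations from fixed-size vs.\ product sampling, which I will absorb everywhere since $\eps \geq 2^{-n^c} \gg e^{-\Omega(N)}$). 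Restricting to the $[n]$-coordinates, the pair $(S_1\cap[n], S_2\cap[n])$ is therefore distributed (up to $e^{-n}$) as $\mathcal{D}_{q,\alpha}$ with the identification $q'/q = \alpha$, i.e.\ I should choose $q' = \alpha q$; then $q'' = (1-\alpha)q$ matches the $(1-\alpha)q$ probabilities in $\mathcal{D}_{q,\alpha}$, and the condition $\alpha \le 0.9$ translates to $q' \le \frac{9}{10}q$, which is exactly the hypothesis Theorem~\ref{thm:DPtest} needs. Since $\tilde{G}[S]$ agrees with $G[S\cap[n]]$ on $S\cap[n]$ and is identically $0$ on $S\setminus[n]$, the agreement events line up: $\tilde{G}[S_1]|_{S_1\cap S_2} \overset{\le T}{\ne} \tilde{G}[S_2]|_{S_1\cap S_2}$ is equivalent to $G[A]|_{A\cap A'} \overset{\le T}{\ne} G[A']|_{A\cap A'}$ where $A = S_1\cap[n]$, $A' = S_2\cap[n]$ (the $S\setminus[n]$ parts contribute no disagreement, being $0$ on both sides). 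Hence $\tilde{G}$ passes \maintest with parameters $(q, q', T)$ with probability at least $\eps - e^{-n} \ge \eps/2$. Also $\alpha \ge 1/\log^C(1/\eps)$ gives $q' = \alpha q \ge q/\log^C(1/\eps)$, so $T = (\log(1/\eps)/q)^{c_0}$ is of the required form $(\log(1/\eps)/q')^{O_{C,c_0}(1)}$ up to adjusting the exponent constant.

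\textbf{Pulling back the global function.} Apply Theorem~\ref{thm:DPtest} to $\tilde{G}$ on $[N]$: there is $g\colon[N]\to[m]$ such that for at least an $(\eps/2)^{O(\log(1/q')^2)} = \eps^{O(\log(1/q)^2)}$ fraction of $S\in\binom{[N]}{qN}$ one has $|\{i\in S : \tilde{G}[S]_i \ne g(i)\}| \le \left(\frac{T\log(1/\eps)}{q'}\right)^{O(1)} = \left(\frac{\log(1/\eps)}{q}\right)^{O(1)} =: r$. Define $S^\star := g|_{[n]}\in[m]^n$, the restriction of $g$ to the first $n$ coordinates. For any $S$ satisfying the conclusion, the coordinates of $S$ inside $[n]$ on which $\tilde{G}[S] = G[S\cap[n]]$ disagrees with $g = S^\star$ form a subset of the bad set, so $\Delta(G[S\cap[n]], S^\star|_{S\cap[n]}) \le r$. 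By Claim~\ref{claim:sd_prod_set} the distribution of $A = S\cap[n]$ for uniform $S\in\binom{[N]}{qN}$ is $e^{-n}$-close to $\mu_q^{\otimes n}$, so
\[
\Prob{A\sim\mu_q^{\otimes n}}{\Delta(G[A], S^\star|_A)\le r}
\ge \Prob{S\in\binom{[N]}{qN}}{\Delta(G[S\cap[n]], S^\star|_{S\cap[n]})\le r} - e^{-n}
\ge \eps^{O(\log(1/q)^2)} - e^{-n},
\]
which is at least $\delta = \eps^{O(\log(1/q)^2)}$ since $\eps \ge 2^{-n^c}$ for a small enough $c > 0$ (this is where the lower bound on $\eps$ is needed, both here and to invoke Theorem~\ref{thm:DPtest} on the size-$N$ ground set, noting $N$ is polynomial in $n$ so $\eps \ge 2^{-N^{c'}}$ for a suitable $c'$). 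This completes the proof.

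\textbf{Main obstacle.} There is no deep difficulty here — the content is entirely in Theorem~\ref{thm:DPtest} and the small-set expansion input — but the delicate point is the parameter matching in the second paragraph: one must verify that the natural \maintest distribution on the larger ground set projects (up to exponentially small error) onto exactly $\mathcal{D}_{q,\alpha}$ with the dictionary $q' = \alpha q$, and that the constraint $\alpha\le 0.9$, the lower bound $\alpha \ge 1/\log^C(1/\eps)$, and the shape of $T$ all survive this translation into the hypotheses of Theorem~\ref{thm:DPtest}. A secondary bookkeeping point is making sure all the $e^{-\Omega(N)}$-type discrepancies (fixed-size vs.\ product sampling, Claim~\ref{claim:sd_prod_set}) stay far below $\eps$, which is guaranteed by the assumption $\eps \ge 2^{-n^c}$ together with the polynomial relation between $N$ and $n$.
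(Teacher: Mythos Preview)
Your overall strategy is exactly the paper's, but the parameter matching in your second paragraph contains a genuine error. You assert that $S_1 = A_0\cup B_0$ and $S_2 = A_0\cup B_1$ have intersection exactly $A_0$, and hence that a coordinate lands in $S_1\cap S_2$ with probability $q'$ and in $S_1\setminus S_2$ with probability $q''$. This is false: $B_0$ and $B_1$ are independent subsets of $[N]\setminus A_0$ and generically overlap, so $S_1\cap S_2 = A_0\cup(B_0\cap B_1)$. In the large-$N$ (product) approximation one has
\[
\Pr[i\in S_1\cap S_2] \;=\; q' + \frac{(q-q')^2}{1-q'},
\qquad
\Pr[i\in S_1\setminus S_2] \;=\; \frac{(q-q')(1-q)}{1-q'},
\]
so neither the distribution of $(S_1\cap[n],S_2\cap[n])$ nor the check set (the \maintest checks on $A_0$, not on $S_1\cap S_2$) lines up with $\mathcal{D}_{q,\alpha}$ under your dictionary $q'=\alpha q$.

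The paper repairs this with one extra step you are missing: before lifting, it replaces the check set $A\cap A'$ by a random subset $A''\subseteq A\cap A'$ obtained by keeping each element independently with probability $q'/(\alpha q)$, which can only increase the acceptance probability. It then proves (Claim~\ref{claim:bd_sd_dp}) that the triple $(A,A',A'')$ is statistically $e^{-\Omega(n)}$-close to $(S_1\cap[n],\,S_2\cap[n],\,A_0\cap[n])$ precisely when $\alpha q = q' + \frac{(q-q')^2}{1-q'}$, and this is the correct parameter relation (not $q'=\alpha q$). With this correction your remaining steps---invoking Theorem~\ref{thm:DPtest} on $\tilde G$, restricting the resulting global function to $[n]$, and transferring the conclusion back via Claim~\ref{claim:sd_prod_set}---go through as you wrote them.
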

	\begin{proof}
		Fix $N= \omega(n^4)$. Given a function $G: (P[n], \mu^{\otimes n}) \rightarrow [m]^{\leq n}$ where $G(A)$ can be thought of as a string in $[m]^{|A|}$ by specifying a fixed arbitrary ordering on $[n]$, we define a map $\tilde{G} : {[N] \choose qN} \rightarrow [m]^{qN}$ as follows. For a set $S\in {[N] \choose qN}$, define $\tilde{G}(S)|_{S\cap [n]} = G(S\cap [n])$ and $\tilde{G}(S)|_{S\setminus [n]} =  0^{|S\setminus [n]|}$.
		We know that
		$$\Pr_{(A,A')\sim \mathcal{D}_{q,\alpha}}[G[A]|_{(A \cap A')} \overset{\leq T}{\neq} G[A']|_{(A \cap A')}]\geq \eps.$$
		Instead of checking consistency on $A \cap A'$, we select a set $A'' \subseteq A \cap A'$ by independently including $i\in A\cap A'$ to $A''$ with probability $q'/\alpha q$. This can only increase the acceptance probability, and hence
		
		\begin{equation}\label{eq:indep_acc}
			\Pr_{\substack{(A,A')\sim \mathcal{D}_{q,\alpha}\\ A''\sim_{q'/\alpha q} A\cap A'}}[G[A]|_{A''} \overset{\leq T}{\neq} G[A']|_{A''}]\geq \eps.
		\end{equation}
		
		We denote the overall distribution on $({A}, {A}', {A}'')$ from the above probability by $\mathcal{D}$. Now consider selecting the pairs $(A_0, B_0)$ and $(A_0, B_1)$ according to the \maintest with parameters $(q, q', \eta)$ for the table $\tilde{G}$. Let $\tilde{A} = A_0\cup B_0 \cap [n]$,  $\tilde{A}' = A_0\cup B_1 \cap [n]$ and $\tilde{A}'' = A_0\cap [n]$. We use $\tilde{\mathcal{D}}$ to denote the distribution on $(\tilde{{A}}, \tilde{{A}}', \tilde{{A}}'')$. We show that the statistical distance between the distributions ${\mathcal{D}}$  and $\tilde{\mathcal{D}}$  is at most $o(1)$.
		\begin{claim}\label{claim:bd_sd_dp}
			The statistical distance between the distributions ${\mathcal{D}}$  and $\tilde{\mathcal{D}}$  is at most $e^{-\Omega(n)}$ when $\alpha q = q'+\frac{(q-q')^2}{(1-q')}$.
		\end{claim}
		\begin{proof}
			Deferred to Section~\ref{sec:missing_dp}.
        \end{proof}

        Using Claim~\ref{claim:bd_sd_dp} and~\eqref{eq:indep_acc}, we conclude	$\Pr[G[\tilde{A}]|_{\tilde{A}''} \overset{\leq T}{\neq}  G[\tilde{A}']|_{\tilde{A}''}]\geq \eps -2^{-\Omega(n)}$. Since $\tilde{G}(S)|_i = 0$ for every $i>n$ and $S\ni i$, we have	
	\[
        \Pr_{(A_0, B_0), (A_0, B_1)}[\tilde{G}[A_0, B_0] |_{A_0}  \overset{\leq T}{\neq}  \tilde{G}[A_0, B_1]|_{A_0}]\geq \eps - 2^{-\Omega(n)},
    \]
	Therefore, using Theorem~\ref{thm:DPtest}, we conclude that there exists a global function $\tilde{g}: [N] \rightarrow [m]$ such that	
	\[
        \Pr_{S\in {[N]\choose qN}}\left[\tilde{G}[S] \overset{\leq r}{\neq} \tilde{g}(S)\right] \geq \delta,
    \]
    where $r$ and $\delta$ are as in Theorem~\ref{thm:DPtest}. Furthermore, based on how we came up with the global function $\tilde{g}$, we have $\tilde{g}(i) = 0$ for all $i\in (n, N]$ as $\tilde{G}(S)|_i = 0$ for all $S\in {[N]\choose qN}$ and $i\in S$. If we let $g: [n] \rightarrow [m]$ be the function $\tilde{g}$ restricted to the domain $[n]$, we have	
	\[
    \Pr_{A \sim \mu_q^{\otimes n}}\left[{G}[A] \overset{\leq r}{\neq} {g}(A)\right] \geq \delta - 2^{-\Omega(n)}.
    \]
	Here,  we used Claim~\ref{claim:sd_prod_set} that shows the statistical distance between the distribution $\mu_q^{\otimes n}$ and the distribution on $S|_{[n]}$ where $S$ is a uniformly random set from ${[N] \choose qN}$ is at most $2^{-\Omega(n)}$. This concludes the proof.
\end{proof}

	\subsection{Getting the Final Direct Product Theorem: Proof of Theorem~\ref{thm:DP_biased_version}}
	\label{sec:product_2}

	In this section we analyze the main direct product test from Theorem~\ref{thm:DP_biased_version}, which is a slight variant of the test given in Corollary~\ref{cor:DP}. Let $G\colon (P[n], \mu_q^{\otimes n})\to[m]^{\leq n}$ be an assignment that to each $A\in P([n])$ assigns a string $G[A]\in [m]^{|A|}$. We show that if $G$ passes this test from Theorem~\ref{thm:DP_biased_version} with probability at least $\eps>0$, it also passes the test from Corollary~\ref{cor:DP} with probability at least $\eps/2$ and hence we get the same global structure.
	
	\begin{figure}[!h]
		\label{fig:maintest_subsetagreement}
		\fbox{
			
			\parbox{450pt}{
				\vspace{10pt}
				Given $G\colon (P[n], \mu_q^{\otimes n})\to[m]^{\leq n}$,
				\begin{itemize}
					\item Sample $(A, A')$ from the distribution $\mathcal{D}_{q,\alpha}$.
					\item Select a random subset $\tilde{A}_0\subseteq A\cap A'$ by adding  each $i\in  A\cap A'$ to $\tilde{A}_0$ with probability $\beta$ independently.
					\item Check if $G[A]|_{\tilde{A}_0} =  G[A']|_{\tilde{A}_0}$
				\end{itemize}
			}
		}
		\caption{\subsetmaintest with parameters $(q,\alpha, \beta)$.}
	\end{figure}

	We first analyze the \subsetmaintest  given in Figure~\ref{fig:maintest_subsetagreement}. In comparison to Corollary~\ref{cor:DP}, in this test we check for complete agreement (i.e., $\eta=0$), but only on a subset of coordinates from $A\cap A'$.
	
	\begin{thm}
		\label{thm:DPtest_subsetagreement}
         For all $C, c_0>0$ there is $c>0$ such that the following holds for sufficiently large $n\in\mathbb{N}$ and $\eps\geq 2^{-n^{c}}$.

		Suppose that $\alpha, \beta \geq \frac{1}{\log^C(1/\eps)}$, $\alpha\leq \frac{9}{10}$ and
        further suppose that $G\colon (P[n], \mu_q^{\otimes n})\to[m]^{\leq n}$ passes the \subsetmaintest with parameters $(q,\alpha, \beta)$ with probability at least $\eps$,
        then there exists a string $S\in [m]^n$ such that
        \[
        \Prob{A\sim_{q}[n]}{\Delta(G[A], S|_{A}) \leq r}\geq \delta,
        \]
		where $\delta = \eps^{O(\log (1/q)^2)}$ and $r=\left(\frac{\log{(1/\eps)}}{q}\right)^{O(1)}$.
\end{thm}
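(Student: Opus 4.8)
The plan is to deduce Theorem~\ref{thm:DPtest_subsetagreement} from Corollary~\ref{cor:DP} by a standard argument that converts a ``subset agreement'' test into a ``Hamming-close agreement'' test on all of $A\cap A'$. The key observation is that if $G[A]$ and $G[A']$ disagree on many coordinates of $A\cap A'$, then a $\beta$-random subset $\tilde{A}_0\subseteq A\cap A'$ will contain at least one disagreement with high probability, so the only way to pass the \subsetmaintest\ with probability $\eps$ is for $G[A]$ and $G[A']$ to be Hamming-close on $A\cap A'$ for an $\eps^{O(1)}$ fraction of the pairs. More precisely, first I would fix the parameter $T = (\log(1/\eps)/q)^{c_0}$ as in Corollary~\ref{cor:DP} (adjusting $c_0$ if needed), and argue that
\[
\Prob{(A,A')\sim \mathcal{D}_{q,\alpha}}{G[A]|_{A\cap A'} \overset{\leq T}{\neq} G[A']|_{A\cap A'}}\geq \eps/2.
\]
Indeed, if for some pair $(A,A')$ the strings $G[A]|_{A\cap A'}$ and $G[A']|_{A\cap A'}$ differ on more than $T$ coordinates, then sampling $\tilde{A}_0$ by including each $i\in A\cap A'$ with probability $\beta$, the probability that $\tilde{A}_0$ misses all the disagreement coordinates is at most $(1-\beta)^{T}\leq e^{-\beta T}$, which is at most $\eps/2$ provided $\beta T\geq \log(2/\eps)$; this holds by the choice of $T$ since $\beta\geq 1/\log^{C}(1/\eps)$ and $T$ is a large power of $\log(1/\eps)/q$. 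Hence conditioned on such a ``bad'' pair the test accepts with probability at most $\eps/2$, so the contribution of bad pairs to the overall acceptance probability is at most $\eps/2$, and the displayed inequality follows by subtracting.

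Once this is established, the distribution $\mathcal{D}_{q,\alpha}$ and the Hamming-closeness condition are exactly those in the hypothesis of Corollary~\ref{cor:DP}, with the acceptance probability $\eps$ replaced by $\eps/2$. Applying Corollary~\ref{cor:DP} (noting that $\eps/2 \geq 2^{-n^{c'}}$ for a slightly smaller constant $c'$, and that $\log(1/(\eps/2)) = \Theta(\log(1/\eps))$ so all the quantitative bounds are preserved up to constants), we obtain a global string $S\in [m]^n$ such that
\[
\Prob{A\sim_{q}[n]}{\Delta(G[A], S|_A)\leq r}\geq \delta
\]
with $\delta = (\eps/2)^{O(\log(1/q)^2)} = \eps^{O(\log(1/q)^2)}$ and $r = (\log(1/(\eps/2))/q)^{O(1)} = (\log(1/\eps)/q)^{O(1)}$, exactly as claimed. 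This completes the proof of Theorem~\ref{thm:DPtest_subsetagreement}, and then Theorem~\ref{thm:DP_biased_version} follows by matching up parameters: the test ${\sf DP}(\rho,\alpha,\beta)$ is precisely the \subsetmaintest\ with $q = \rho$ (and the set $C$ in the ${\sf DP}$ test playing the role of $A\cap A'$, which is distributed as a $\rho\alpha$-biased subset, matching the $\mathcal{D}_{q,\alpha}$ intersection of measure $\alpha q$).

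I do not anticipate a serious obstacle here; the argument is essentially a one-line reduction plus bookkeeping of parameters. The only point requiring a little care is verifying that the parameter regime $\beta \geq 1/\log^C(1/\eps)$ together with $T = (\log(1/\eps)/q)^{c_0}$ indeed forces $(1-\beta)^T \leq \eps/2$ --- this is where the specific polynomial dependence of $T$ on $\log(1/\eps)$ and $1/q$ matters, and one should choose $c_0$ (or rather track how $c_0$ in Corollary~\ref{cor:DP} relates to the $c_0$ one is allowed in Theorem~\ref{thm:DPtest_subsetagreement}) large enough that $\beta T \geq \log(2/\eps)$ holds for all admissible $\beta$. A secondary minor point is confirming that the hypothesis $\eps \geq 2^{-n^c}$ of Corollary~\ref{cor:DP} is inherited, which is immediate after shrinking $c$ slightly. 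Everything else --- the fact that missing all disagreements has probability $(1-\beta)^{|\text{disagreement set}|}$, the averaging to split off the bad pairs, and the translation of the ${\sf DP}(\rho,\alpha,\beta)$ test into the \subsetmaintest\ language --- is routine.
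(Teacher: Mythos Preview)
Your proposal is correct and follows essentially the same approach as the paper: split the pairs $(A,A')$ according to whether $G[A]|_{A\cap A'}$ and $G[A']|_{A\cap A'}$ differ on more than $T$ coordinates, bound the contribution of ``bad'' pairs by $(1-\beta)^T\leq \eps/2$, and apply Corollary~\ref{cor:DP} to the remaining $\eps/2$ mass of ``good'' pairs. The paper's only cosmetic difference is that it writes the threshold as $\eta=(1/(q\beta))^{c_0}$ rather than your $T=(\log(1/\eps)/q)^{c_0}$, which are equivalent under the hypothesis $\beta\geq 1/\log^C(1/\eps)$.
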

	\begin{proof}
		Let $c_0\gg C$. We show that if $G$ passes the \subsetmaintest with parameters $(q, \alpha, \beta)$ with probability $\eps$, then it also passes the the test from Corollary~\ref{cor:DP}, with respect to the distribution $\mathcal{D}_{q,\alpha}$ and $\eta = \left(\frac{1}{q\cdot \beta}\right)^{c_0}$, with probability at least $\eps/2$. Let $E_{>\eta}$ be the event that $G[A]|_{A\cap A'} \overset{>\eta}{\neq}   G[A']|_{A\cap A'}$, and $E_{\leq \eta}$ be the event that $G[A]|_{A\cap A'}\overset{\leq \eta}{\neq} G[A']|_{A\cap A'}$. Let $p_{>\eta}$ and $p_{\leq \eta}$ be the probability of  events $E_{>\eta}$ and $E_{\leq\eta}$ respectively. We have
		\begin{align*}
			\eps\leq \Pr_{\substack{(A, A')\sim\mathcal{D}_{q,\alpha},\\ \tilde{A_0}\sim_\beta A\cap A'}}[G[A]|_{\tilde{A}_0} =  G[A']|_{\tilde{A}_0}]
			 &= p_{>\eta} \cdot \Pr_{\substack{(A, A')\sim\mathcal{D}_{q,\alpha},\\ \tilde{A_0}\sim_\beta A\cap A'}}\left[G[A]|_{\tilde{A}_0} =  G[A']|_{\tilde{A}_0}\mid E_{>\eta}\right]
\\ &+ p_{\leq \eta} \cdot \Pr_{\substack{(A, A')\sim\mathcal{D}_{q,\alpha},\\ \tilde{A_0}\sim_\beta A\cap A'}}\left[G[A]|_{\tilde{A}_0} =  G[A']|_{\tilde{A}_0}\mid E_{\leq \eta}\right]\\
			& \leq  1\cdot (1-\beta)^\eta + p_{\leq \eta}\cdot 1.\\
			& \leq \eps/2 +  p_{\leq \eta}.
		\end{align*}
		This shows that $ p_{\leq \eta}\geq \eps/2$ and hence $G$ passes the test from Corollary~\ref{cor:DP} with probability at least $\eps/2$. We can now apply  Corollary~\ref{cor:DP} on $G$ to get the conclusion.

	We are now ready to prove Theorem~\ref{thm:DP_biased_version}.
	
	\paragraph{Proof of Theorem~\ref{thm:DP_biased_version}:} The distribution in the test $\textsf{DP}(\rho, \alpha, \beta)$ can be simulated by the \subsetmaintest with parameters $(q, \tilde{\alpha}, \tilde{\beta})$ where $q=\rho$, $\tilde{\alpha}q = \rho \alpha + (\rho-\alpha\rho)^2$  and $\tilde{\beta}\tilde{\alpha} q = \beta \alpha\rho$. Therefore, the proof follows from Theorem~\ref{thm:DPtest_subsetagreement}.

	\subsection{Small-set Expansion Property of the Graphs over a Multi-slice}
	\label{section:sse}
	In this section, we show the small set expansion property of the graph as stated in Lemma~\ref{lemma:sse_prem}.
	
	Recall the graph $\mathcal{G}_n$ that we defined on  the set of vertices$\{(A, B) \mid A, B\subseteq [n], A\cap B = \emptyset, |A| = q'n, |B|=(q-q')n\}$. A random neighbor $(A', B')$ of $(A, B)$ in this graph is sampled conditioned on the fact that $A'$ is distributed uniformly conditioned on $|A'| = q'n$ , $|A\cap A'| = cq'n$ and $B'\subseteq[n]\setminus A'$ is a uniformly random set of size $(q-q')n$. In this section, we deduce the small-set expansion property of the graph $\mathcal{G}_n$. We can view the vertices of the above graph $\mathcal{G}_n$ as the multi-slice of $\{0,1,2\}^n$ -- map the vertex $(A, B)$ to $\x\in \{0,1,2\}^n$ where $x_i = 1$ if $i\in A$, $x_i = 2$ if $i\in B$ and $x_i=0$ if $i\in [n]\setminus (A\cup B)$.
	Let us denote the multi-slice by $\mathcal{U}_n$.
	
	One can view the multi-slice $\mathcal{U}_n$ as a quotient space $S_n/(S_{(1-q)n} \times S_{q'n} \times S_{(q-q')n})$, which is useful in lifting the standard representation-theoretic decomposition of functions over $S_n$ to decompositions of functions over $\mathcal{U}_n$. In order to state the relevant lemmas from~\cite{BravermanKLM22}, we need the following few definitions.
	\begin{definition}
		A function $f: S_n \rightarrow \mathbb{R}$ is called a $d$-junta if there exists a set of coordinate $A\subseteq [n]$ of size at most $d$ such that $f(\pi) = g(\pi(A))$  for some function $g : [n]^A \rightarrow \mathbb{R}$.
	\end{definition}
	For two function $f, g: S_n \rightarrow \mathbb{R}$, define the inner product $\langle f, g\rangle$ as $\E_{\pi}{[f(\pi)g(\pi)]}$.
	
	\begin{definition}
		For $d = 0,1,\ldots, n$ we denote by $V_d(S_n) \subseteq \{f : S_n \rightarrow R\}$ the span of $d$-juntas. Also, define $V_{=d}(S_n) = V_d(S_n) \cap V_{d-1}(S_n)^\perp$.
	\end{definition}
	Therefore, we can write the space of real-valued functions as $V_{=0}(S_n)\oplus V_{=1}(S_n) \oplus \ldots \oplus V_{=n-1}(S_n)$, and
	thus write any $f : S_n \rightarrow  \mathbb{R}$ uniquely as $f  = \sum_{i=0}^{n-1} f^{=i}$ where  $f^{=i}\in V_{=i}(S_n)$. Let $V_{\geq d}(\mathcal{U}_n)$ ($V_{\leq d}(\mathcal{U}_n)$) be the span of functions over $\mathcal{U}_n$ whose degree is at least (at most) $d$.\\

	We say a distribution $\mu$ over $([3] \times  [3])^n$ commutes with the action of $S_n$ if the following distributions are the same for all $\pi\in S_n$ and $x \in [3]^n$: a) $\x'$, where $(\x, \x')\sim \mu$ conditioned on $\x = \pi(x)$, and b) $\pi(\x')$, where $(\x, \x')\sim \mu$ conditioned on $\x = x$. The following claim shows that the operator $\mathcal{T}$ that commutes with $S_n$ preserves the degree of the functions.
	
	\begin{claim}
		\label{claim:spaces_preserved}(Claim 3.6 from ~\cite{BravermanKLM22})
		Suppose $\mathcal{T}$ is an operator that commutes with the action of $S_n$ on functions over the multi-slice $\mathcal{U}_n$. Then for each $0\leq d < n$, we have that $\mathcal{T}(V_{=d}(\mathcal{U}_n)) \subseteq V_{=d}(\mathcal{U}_n)$.
	\end{claim}
	We observe the following few properties of the multi-slice $\mathcal{U}_n$ and the operator $\mathcal{T}(\mathcal{G}_n)$.
	\begin{enumerate}
		\item In a multi-slice, every symbol appears the same number of times in every element of the multi-slice. If we let $k_i$ be the number of times the symbol $i$ appears, then the multi-slice is called {\em $\alpha$-balanced} if $k_i\geq \alpha n$ for every $i$. The multi-slice $\mathcal{U}_n$ is $\alpha$-balanced for $\alpha = \min\{q', (q-q'), (1-q)\}$.
		\item The edge distribution of the graph $\mathcal{G}_n$ is a distribution on the multi-slices $\mathcal{U}_n\times \mathcal{U}_n$. A distribution $\mu$ on $\mathcal{U}_n\times \mathcal{U}_n$ is called $\alpha$-admissible if a) the distribution is symmetric under $S_n$, and b) for all $(a,b) \in \{0,1,2\} \times \{0,1,2\}$, the quantity $\Pr_{(\x, \y)\sim \mu, i\in [n]}{[x_i = a\ \&\ y_i = b]}$ is either at least $\alpha$ or $0$. It can be easily observed that the edge distribution of $\mathcal{G}_n$ is $\alpha$-admissible for $\alpha =\Omega((cq')^2)$.
		\item A distribution $\mu$ on $\mathcal{U}_n\times \mathcal{U}_n$ also called {\em connected} iff the bipartite graph $(V_1\cup V_2, E)$ where a) $V_i$ is the corresponding support of the marginal distribution of $\mu$, and  b) $(\x, \y)\in E$ iff $(\x, \y)$ is in the support of $\mu$, is connected. Here again, it is easy to observe the connectedness property of the edge distribution of the graph $\mathcal{G}_n$.
		\item Finally, the operator $\mathcal{T}(\mathcal{G}_n)$ commutes with  the action of $S_n$.
	\end{enumerate}
	
	One of the important characteristic of $\alpha$-admissible and connected distributions, as shown in ~\cite{BravermanKLM22}, is that it can be replaced by a certain product distribution $\nu^{\otimes n}$ as far as low-degree functions are concerned. Thus, this gives a way to prove analytical results for a multi-slice by invoking the corresponding results over a product distribution.

	The following lemma from~\cite{FOW,FOWitcs} gives an upper bound on $\|\mathcal{P}_{\leq d}\|_{2\rightarrow 4}$, where $\mathcal{P}_{\leq d}$ is the projector operator into the space $V_{\leq d}(\mathcal{U}_n)$. It crucially uses the fact that $\mathcal{U}_n$ is $\alpha$-balanced.
	
	\begin{lemma}[Lemma 29 from~\cite{FOW}]
		\label{lemma:hc}
		For all $c>0$ and  $0<q'< q< 1$ and $d\in \mathbb{N}$, if $f : \mathcal{U}_n \rightarrow \mathbb{R}$ is a function of degree at most $d$, then $\|f\|_4 \leq \left(\frac{1}{cq'}\right)^{O(d)} \|f\|_2$.
	\end{lemma}
	
	Finally, we need the following lemma that bounds the eigenvalues of the operator $\mathcal{T}(\mathcal{G}_n)$. This lemma uses the fact the the edge distribution is $\alpha$-admissible and connected.
	
	\begin{lemma}[Lemma 3.11 from~\cite{BravermanKLM22}]
		\label{lemma:evs_multislice graph}
		There are constants $C>0$ and $\delta>0$ such that for all $c>0$ and  $0<q'< q< 1$ such that $q' = \Omega(q)$,
        for all $d\in \mathbb{N}$, if $f\in V_{>d}(\mathcal{U}_n)$, we have $\|\mathcal{T}(\mathcal{G}_n) f\|_2 \leq  C(1+\delta)^{-\frac{d}{\log(1/cq')}}\|f\|_2$.
	\end{lemma}
	
	We note that the Lemma 3.11 from~\cite{BravermanKLM22} gives a bound of the form $\|\mathcal{T}(\mathcal{G}_n) f\|_2 \leq  C(1+\delta)^{-d}\|f\|_2$, where the parameters $C$ and $\delta$ depend on the parameters $q', q$, or more generally the parameter $\alpha$ from the $\alpha$-balancedness and $\alpha$-admissible property. The dependence of $\alpha$ on $\delta$ can be as bad as $\delta \sim \alpha^{-1}$ and such a bound will give us
	$\|\mathcal{T}(\mathcal{G}_n) f\|_2 \leq  C(1+\delta)^{- d\cdot (cq')^2}\|f\|_2$ which is not sufficient for our purpose.
	We observe that a slight adjustment of their proof gives a better quantitative bound as stated in the lemma in our current setting. We first describe the difference between our setting and the setting in  ~\cite{BravermanKLM22} and then briefly sketch the proof of Lemma~\ref{lemma:evs_multislice graph}.
	
	The $\alpha$-admissible property is used to get bound the second eigenvalue of a graph $H$ defined as follows: The vertex set is $\{0,1,2\}$ and for $a, b \in \{0,1,2\}$, the weight of the edge $(a, b)$ is
	$$w(a, b) = \Pr_{(x,x') \sim \mathcal{G}_n, i\in [n]}[ x'_i = b \mid x_i = a].$$
	Therefore, if the edge distribution of $\mathcal{G}_n$ is $\alpha$-admissible, then the non-zero weight of an edge in the graph $H$ is at least $\alpha$, and hence $\lambda_2(H)\leq 1-\alpha^2/2$. In our case, the stationary distribution of $H$ is given by $\nu$ where $\nu(0) = q'$, $\nu(1) = (q-q')$ and $\nu(2)= (1-q)$. As $q'= \Omega(q)$, in our case, we have $\lambda_2(H) = 1-\Omega(1)$. Therefore, we do not lose much in the final bound. We now sketch the proof and assume readers to be familiar with the concepts from ~\cite{BravermanKLM22}.
	
	\paragraph{Proof sketch of Lemma~\ref{lemma:evs_multislice graph}:} For convenience we write $T:= \mathcal{T}(\mathcal{G}_n)$ and let $\alpha = \Omega((cq')^2)$.  Suppose $f\in V_{>d}(\mathcal{U}_n)$. We can write the function $f$ in as the sum of eigenvectors $f_1, f_2, \ldots$ of the operator $T$. Therefore, it is enough to given an upper bound the  eigenvalue of $T$ corresponding to the functions $f_i$, $\|T f_i \|_2\leq C(1+\delta)^{-d} \|f_i\|_2$. Let $\theta$ be the eigenvalue corresponding to the function $f_i$. In ~\cite{BravermanKLM22}, the authors show $\theta \leq C(1+\delta)^{-d}$ using the trace method as follows: suppose the multiplicity of $\theta$ is $m$, then using the fact that the trace of an operator is the sum of its eigenvalues, we have
	$$m\theta \leq \mathrm{Tr}(T).$$
	This gives a bound $\theta \leq \frac{\mathrm{Tr}(T)}{m}$. In order to get a reasonable bound, instead of bounding the trace of $T$, in ~\cite{BravermanKLM22}, the authors work with $T^h$ for some $h\geq 1$ and  get
	$$\theta^h \leq \frac{\mathrm{Tr}(T^h)}{m}.$$
	
	Since the degree of $f$ was at least $d$ to begin with, using this fact one can show that $m \geq c_0^d$, for some absolute constant $c_0>1$. Therefore, it is enough to bound $\mathrm{Tr}(T^h)$. Note that if we show $\mathrm{Tr}(T^h)\leq (1+\xi)^{d}$, $\xi$ is an absolute constant and $1+\xi < c_0$ , then we get $\theta = (1+\delta)^{-d/h}$. Therefore, if we can show that $\mathrm{Tr}(T^h)\leq (1+\xi)^{d}$ when $h = \Theta(\log(1/cq'))$, then this will be enough to prove  Lemma~\ref{lemma:evs_multislice graph}. We show this is indeed the case when $\lambda_2(H) \leq 1-\Omega(1)$.
	
	The crucial point is that as $\lambda_2(H) \leq 1-\Omega(1)$, we have the following claim which gives a stronger quantitative bound on $h$ with respect to $\eps$.
	
	\begin{claim}
		\label{claim:walk_on_H}
		For all $\eps>0$ and $h = O(\log(1/\eps))$ such that for all $v_1\in [3]$
		$$\left|\Pr_{\substack{v_2, v_3, \ldots, v_h\\\text{ random walk on $H$ from $v_i$}}}[v_h = a] - \nu(a) \right| \leq \eps.$$
	\end{claim}
	\begin{proof}
		Using the standard spectral argument, the left-hand side quantity can be upper bounded by $\lambda_2(H)^h$ and since $\lambda_2(H) \leq 1-\Omega(1)$, the claim follows.
	\end{proof}
	
	In order to give a concrete bound on the trace of $T^h$ using combinatorial analysis, the proof goes by writing the operator $T^h$ as a sum of operators of the form $R_{\overrightarrow{r}}$ where $\vec{r}= (r_{a,b})_{a,b\in [3]}$ that sum to $n$. The operator $R_{\vec{r}}$ corresponds to the distribution on $(\mathbf{x}, \mathbf{y})$ where $\mathbf{x}\in \mathcal{U}_n$ is uniform and $\mathbf{y}$ is sampled according to $T^h$ conditioned on the statistics if $(\mathbf{x}, \mathbf{y})$ being $\vec{r}$. Thus, we can write
	
	$$T^h = \sum_{\vec{r}} p_{\vec{r}} R_{\vec{r}}.$$
	
	Call $\vec{r}$  $\eps$-reasonable $\left| r_{a,b} - n\nu(a)\nu(b)\right|\leq 3\eps n$ for all $a,b\in [3]$; otherwise call $\vec{r}$ unreasonable.
	Claim 3.15 from~\cite{BravermanKLM22} shows that $\sum_{\mbox{unresonable }  \vec{r}} p_{\vec{r}} \leq 100\alpha^{-1}e^{-\alpha\eps^2 n}$. Therefore, we can focus on reasonable $\vec{r}$.  The actual argument is based on weather $d \geq \gamma n$ for some constant $\gamma$ of $d\leq \gamma n$. Since we do not care about the constants in the exponent, we can focus on the latter case when $d <\gamma n$.
	
	As shown in~\cite{BravermanKLM22}, when $d\leq \gamma n$, one can assume that $n = 3d$ in order to compute the trace of the operator $R_{\vec{r}}$ (and hence the trace of $T^h$). Unraveling the quantitative bounds, the Claim 3.24 from the paper shows that
	
	$$\mathrm{Tr}(R_{\vec{r}}) \leq \left( 1 + O\left(\frac{\eps}{\alpha^2} \right)\right)^d.$$
	
	Therefore if we set $\eps = O_{c_0}(\alpha^2)$, then using Claim~\ref{claim:walk_on_H}  we get $\mathrm{Tr}(T^h)\leq (1+\xi)^d$ for $h = \Theta(\log(1/cq'))$ as required, ignoring the unreasonable $\vec{r}$ as their total weight is $\ll (1+\xi)^d$.
	\qed\\

	We are now ready to prove the small-set expansion property of the graph $\mathcal{G}_n$.
	\begin{lemma}
		[Restatement of Lemma~\ref{lemma:sse_prem}]
		For every $c>0$,  $0<q'< q< 1$ and $\mu>0$, the graph $\mathcal{G}_n$ defined above has
		$$\phi_{\mathcal{G}_n}(\mu) \geq 1 - \mu^{\Omega\left(\frac{1}{\log(1/cq')^2}\right)}.$$
	\end{lemma}
	\begin{proof}
		Fix any set $S\subseteq V(\mathcal{G}_n)$ of density at most $\mu$. Let $f : \mathcal{U}_n \rightarrow \{0,1\}$ be the indicator function of $S$. As stated at the begining of the section, we can write $f$ as $f = \sum_{i=0}^{n-1} f^{=i}$ where $f^{=i} \in V_{=i}(\mathcal{U}_n)$. Let $f = f_1 + f_2$, where the component $f_1 =  \sum_{i=0}^{d} f^{=i}$ and $f_2  =  \sum_{i=d+1}^{n-1} f^{=i}$ for some $d$ to be fixed later. By letting  $\mathcal{T}:= \mathcal{T}(\mathcal{G}_n)$, we have
		\begin{align*}
			\phi_{\mathcal{G}_n}(S) = 1 - \frac{\langle f, \mathcal{T} f\rangle }{\mu}.
		\end{align*}
		By letting $\tau : =  C(1+\delta)^{-\frac{d}{\log(1/cq')}}$ from Lemma~\ref{lemma:evs_multislice graph}, we can bound
		\begin{align*}
			\langle f, \mathcal{T} f\rangle &= \langle f_1, \mathcal{T} f_1\rangle  + \langle f_2,  \mathcal{T} f_2\rangle \tag*{(Using Claim~\ref{claim:spaces_preserved} and orthogonality of spaces $V_{=i}(\mathcal{U}_n)$)}\\
			&\leq \|f_1\|_2^2 + \tau \|f_2\|_2^2 \tag*{(Using Claim~\ref{lemma:evs_multislice graph})}\\
			&\leq \|f_1\|_2^2 + \tau \mu.
		\end{align*}
		Therefore,
		\begin{equation}
			\label{eq:expansion_lb1}
			\phi_{\mathcal{G}_n}(S) \geq 1 - \tau - \frac{\|f_1\|_2^2}{\mu}.
		\end{equation}
		If we let $\mathcal{P}_{\leq d}$ be the projector operator into the subspace $V_{\leq d}(\mathcal{U}_n)$, then we have
		\begin{align*}
			\| \mathcal{P}_{\leq d}\|_{4/3 \rightarrow 2} = \max_{g\neq 0} \frac{ \|\mathcal{P}_{\leq d} g\|_{2}}{\|g\|_{4/3}} \geq \frac{ \|\mathcal{P}_{\leq d} f\|_{2}}{\|f\|_{4/3} }=  \frac{ \|f_1\|_{2}}{\|f\|_{4/3}}.
		\end{align*}
		Since $\|f\|_{4/3} = \mu^{3/4}$, we have $\|f_1\|_2^2 \leq \| \mathcal{P}_{\leq d}\|^2_{4/3 \rightarrow 2}\cdot \mu^{3/2}$. Using the fact that $\| \mathcal{P}_{\leq d}\|^2_{4/3 \rightarrow 2} = \| \mathcal{P}_{\leq d}\|^2_{2 \rightarrow 4}$, we get $\|f_1\|_2^2 \leq \| \mathcal{P}_{\leq d}\|^2_{2 \rightarrow 4}\cdot \mu^{3/2}$. Therefore,
		\begin{equation}
			\label{eq:expansion_lb2}
			\phi_{\mathcal{G}_n}(S) \geq 1 - \tau -\| \mathcal{P}_{\leq d}\|^2_{2 \rightarrow 4}\cdot \mu^{1/2}.
		\end{equation}
		Finally, using Lemma~\ref{lemma:hc}, we can bound $\| \mathcal{P}_{\leq d}\|^2_{2 \rightarrow 4}$ as follows.For any function $g$, let $g_1$ be the component of $g$ from $V_{\leq d}$ and $g_2:=g - g_1$ be orthogonal to $g_1$.
		$$\| \mathcal{P}_{\leq d}\|^2_{2 \rightarrow 4} = \max_{g:=g_1+g_2\neq 0}\frac{\|\mathcal{P}_{\leq d} g\|^2_4}{\|g\|^2_2} = \max_{g:=g_1+g_2}\frac{\|g_1\|^2_4}{\|g_1\|_2^2 + \|g_2\|_2^2} \leq \max_{g_1\in V_{\leq d}}\frac{\|g_1\|_4^2}{\|g_1\|_2^2}\leq (1/cq')^{O(d)},$$
		where the last inequality uses Lemma~\ref{lemma:hc}. Plugging this into \eqref{eq:expansion_lb2}, we get
		\begin{align*}
			\phi_{\mathcal{G}_n}(\mu) \geq 1 - C(1+\delta)^{-\frac{d}{\log(1/cq')}} - (1/cq')^{O(d)}\cdot \mu^{1/2}.	
		\end{align*}
		If we choose $d = O\left(\frac{\log(1/\mu)}{\log(1/cq')}\right)$, then it is easy to observe that $\phi_{\mathcal{G}_n}(\mu) \geq 1 - \mu^{\Omega\left(\frac{1}{\log(1/cq')^2}\right)}$.
	\end{proof}

\section*{Acknowledgements}
We thank Yang P. Liu for his careful reading of the manuscript and for spotting several errors in earlier versions.
\bibliography{ref}
\bibliographystyle{plain}
\appendix
\section{Missing Proofs}
\subsection{Merging Symbols: Proof of Lemma~\ref{lem:merge}}\label{sec:merge}
Consider $\mathrm{T}\colon L_2(\Sigma;\mu_x)\to L_2(\Gamma\times \Phi;\mu_{y,z})$ defined as
\[
\mathrm{T}f(y,z) = \cExpect{(x',y',z')\sim \mu}{y'=y,z'=z}{f(x')},
\]
and let $\mathrm{S} = \mathrm{T}^{*}\mathrm{T}$. Note that $\mathrm{S}$ can also be viewed as a Markov chain over $\Sigma$,
and for $a\in \Sigma$ we denote by $\mathrm{S}a$ the distribution over $\Sigma$ of the neighbours of $a$ according to this
Markov chain. Let $G = (\Sigma, E)$ be the graph in which the edges are the support of the Markov chain $\mathrm{S}$, and
note that $y,z$ imply $z$ if and only if the number of connected components of $G$ is exactly $\card{\Sigma}$. Thus, we assume
that the number of connected components is $\ell < \card{\Sigma}$, and we write them as $C_1\cup\ldots\cup C_{\ell}$.

Note that $\mathrm{S}$ is a symmetric operator, hence we may diagonalize it on each connected components separately. Namely,
we may find functions $\chi_{i,j}\colon \Sigma\to \mathbb{R}$ for $i=1,\ldots,\ell$ and $j=0,\ldots,\card{C_i} -1$
that are an orthogonal basis of $L_2(\Sigma;\mu_x)$ and
furthermore:
\begin{enumerate}
  \item $\chi_{i,j}$ is only supported on $C_i$.
  \item $\chi_{i,0}$ is constant on $C_i$.
  \item $\chi_{i,j}$ are eigenfunctions of $\mathrm{S}$ with eigenvalue $\lambda_{i,j}$.
\end{enumerate}
Thus, for all $i$ we have $\lambda_{i,0} = 1$, and for $j>0$ we have as in Lemma~\ref{lem:mossel_MC} that
$\lambda_{i,j}\leq 1 - \Omega_{\alpha,m}(1)$, and $\lambda_{i,j}\geq 0$ as $\mathrm{S}$ is positive semi-definite.

We consider the orthonormal basis $\chi_{\vec{i},\vec{j}}$ now over $L_2(\Sigma^{n}, \mu_x^{\otimes n})$ defined
by $\chi_{\vec{i},\vec{j}}(x) = \prod\limits_{k=1}^{n} \chi_{i_k,j_k}(x_k)$. The non-merged degree of a monomial
$\chi_{\vec{i},\vec{j}}$ is defined to be the number of $k$'s such that $j_k>0$. Thus, we can write
\[
f(x) = \sum\limits_{\vec{i},\vec{j}}\widehat{f}(\vec{i},\vec{j}) \chi_{\vec{i},\vec{j}}(x),
\qquad\qquad \text{ where }\qquad\widehat{f}(\vec{i},\vec{j}) = \inner{f}{\chi_{\vec{i},\vec{j}}}_{L_2(\Sigma^n; \mu_x^{n})}.
\]
Take $d = W\log(1/\delta)$ for a parameter $W = W(\alpha,m)>0$ to be chosen later,
and write $f = f_1 + f_2$ where $f_1$ is the part of $f$ with non-merged degree less
than $d$, and $f_2$ is the part of $f$ with non-merged at least $d$. Then
\begin{equation}\label{eq:merge}
\Expect{(x,y,z)\sim \mu}{f(x)g(y)h(z)}
=
\underbrace{\Expect{(x,y,z)\sim \mu}{f_1(x)g(y)h(z)}}_{(\rom{1})}
+
\underbrace{\Expect{(x,y,z)\sim \mu}{f_2(x)g(y)h(z)}}_{(\rom{2})},
\end{equation}
and we upper bound each term on the right hand side separately.

\paragraph{Bounding $(\rom{2})$.} For $f_2$, we have by Cauchy-Schwarz that
\begin{align*}
\card{\Expect{(x,y,z)\sim \mu^{\otimes n}}{f_2(x)g(y)h(z)}}^2
=\card{\inner{\overline{gh}}{\mathrm{T} f_2}_{\mu_{y,z}}}^2
\leq \norm{\mathrm{T} f_2}_{2; \mu_{y,z}}^2
=\inner{\mathrm{T} f_2}{\mathrm{T} f_2}_{\mu_{y,z}}
=\inner{\mathrm{S} f_2}{f_2}_{\mu_x}.
\end{align*}
To upper bound the last expression, we note that
\[
\inner{\mathrm{S} f_2}{f_2}_{\mu_x}
=\sum\limits_{\substack{\vec{i},\vec{j}\\\text{non-merge degree at least $d$}}}\card{\widehat{f}(\vec{i},\vec{j})}^2\prod\limits_{k=1}^{n} \lambda_{i,j}
\leq (1-\Omega_{\alpha,m}(1))^{d}\leq \delta,
\]
for appropriately chosen $W$. Hence, we get that $(\rom{2})\leq \sqrt{\delta}$.

\paragraph{Bounding $(\rom{1})$.}
Here, we are going to use random restrictions so that almost all of the mass of $f_1$ will collapse to monomials of non-merged degree
$0$, at which point we could truncate off the part of non-merged degree exceeding $0$ and get a function that does not distinguish between
the distribution $\mu$ and the distribution $\mu'$. More precisely, let $s = \delta^{1/3}/d$ and choose $J\subseteq [n]$ by including
each element in it with probability $s$, sample $(\tilde{x},\tilde{y},\tilde{z})\sim \mu^{J}$ and define
\[
\tilde{f_1} = (f_1)_{\overline{J}\rightarrow \tilde{x}},
\qquad
\tilde{g} = (g)_{\overline{J}\rightarrow \tilde{y}},
\qquad
\tilde{h} = (h)_{\overline{J}\rightarrow \tilde{z}}.
\]
Let $\eta>0$ be from Theorem~\ref{thm:nonembed_deg_must_be_small} for $\mu'$, and
define the events:
\begin{enumerate}
  \item $E_1$: $\norm{\tilde{f_1}}_2\geq \delta^{-\eta/10}$.
  \item $E_2$: ${\sf NEStab}_{1-K\delta/s}(\tilde{g};\mu_y)\leq \delta^{2/3}$, where $K = K(m,\alpha)>0$.
  \item $E_3$: The mass of $\tilde{f_1}$ on monomials of non-merged degree more than $0$ exceeds $\delta^{1/6}$.
\end{enumerate}
Let $E = \overline{E_1} \cap E_2\cap \overline{E_3}$; we show that $\Prob{}{E}\geq 1-3\delta^{\eta'}$ for $\eta'=\min(\eta/5,1/6)$.
We do so using the union bound. For $E_1$, we have that the expected value of $\norm{\tilde{f_1}}_2^2$ is $\norm{f_1}_2^2\leq 1$,
hence by Markov's inequality $\Prob{}{E_1}\leq \delta^{\eta/5}$. For $E_2$, we have by Lemma~\ref{lem:op_comparison_lemma} that
$\Expect{}{{\sf NEStab}_{1-K\delta/s}(\tilde{g};\mu_y)}\leq {\sf NEStab}_{1-\delta}(g;\mu_y)\leq \delta$ for suitably chosen $K$,
hence by Markov's inequality $\Prob{}{E_2}\geq 1-\delta^{1/3}$. For $E_3$, we have
\begin{align*}
\Expect{J,\tilde{x}}{\sum\limits_{\vec{i}\in [\ell]^J,\vec{j}}\widehat{\tilde{f_1}}(\vec{i},\vec{j})^2 1_{\chi_{\vec{i}, \vec{j}}\text{ has non-merge degree at least $1$}}}
&=\Expect{J}{\sum\limits_{\vec{i}\in [\ell]^n,\vec{j}}\widehat{\tilde{f_1}}(\vec{i},\vec{j})^2 1_{\sett{k\in [n]}{j_k>1}\cap J \neq\emptyset}}\\
&=\sum\limits_{\vec{i}\in [\ell]^n,\vec{j}}\widehat{\tilde{f_1}}(\vec{i},\vec{j})^2\Expect{J}{1_{\sett{k\in [n]}{j_k>1}\cap J \neq\emptyset}}\\
&\leq \sum\limits_{\vec{i}\in [\ell]^n,\vec{j}}\widehat{\tilde{f_1}}(\vec{i},\vec{j})^2 ds\\
&\leq \delta^{1/3},
\end{align*}
and so by Markov's inequality $\Prob{}{E_3}\leq \delta^{1/6}$. We write
\[
(\rom{1})=
\underbrace{\Expect{J,\tilde{x},\tilde{y},\tilde{z}}{1_{\overline{E}} \phi_{\mu}(\tilde{f_1},\tilde{g},\tilde{h})}}_{(\rom{3})}
+
\underbrace{\Expect{J,\tilde{x},\tilde{y},\tilde{z}}{1_{E} \phi_{\mu}(\tilde{f_1},\tilde{g},\tilde{h})}}_{(\rom{4})}
\]
where $\phi_{\mu}(\tilde{f_1},\tilde{g},\tilde{h}) = \Expect{(x,y,z)\sim \mu^{J}}{\tilde{f_1}(x)\tilde{g}(y)\tilde{h}(z)}$.

\paragraph{Bounding $(\rom{3})$.}
For $(\rom{3})$ we have that
\[
\card{(\rom{3})}\leq
\sqrt{\Prob{}{\overline{E}}}
\sqrt{\Expect{J,\tilde{x},\tilde{y},\tilde{z}}{\phi_{\mu}(\tilde{f_1},\tilde{g},\tilde{h})^2}}
\leq
\sqrt{\Prob{}{\overline{E}}}
\sqrt{\Expect{J,\tilde{x},\tilde{y},\tilde{z}}{\norm{\tilde{f_1}}_2^2}}
=\sqrt{\Prob{}{\overline{E}}}\norm{f_1}_2,
\]
which is at most $\sqrt{3}\delta^{\eta'/2}$.

\paragraph{Bounding $(\rom{4})$.}
For $(\rom{4})$, the point is that after random restriction the function $\tilde{f_1}$ barely notices the difference
between the distributions $\mu$ and $\mu'$, and so we can try to upper bound $\phi(\tilde{f_1},\tilde{g},\tilde{h})$
by appealing to Theorem~\ref{thm:nonembed_deg_must_be_small} over the distribution $\mu'$. The only two
issues is that $\tilde{f_1}$ still has a slight mass on monomials with non-merge degree greater than $0$, and secondly
that $\tilde{f_1}$ is not bounded, and we next address these issues.

Let $\tilde{f}_0$ be the part of $\tilde{f_1}$ of non-merge degree $0$. Then
\[
(\rom{4})
=
\underbrace{\Expect{J,\tilde{x},\tilde{y},\tilde{z}}{1_{E} \phi_{\mu}(\tilde{f_1}-\tilde{f}_0,\tilde{g},\tilde{h})}}_{(\rom{5})}
+
\underbrace{\Expect{J,\tilde{x},\tilde{y},\tilde{z}}{1_{E} \phi_{\mu}(\tilde{f}_0,\tilde{g},\tilde{h})}}_{(\rom{6})},
\]

\paragraph{Bounding $(\rom{5})$.}
For $(\rom{5})$ we clearly have that
\[
\card{(\rom{5})}
\leq \Expect{J,\tilde{x},\tilde{y},\tilde{z}}{1_{E} \norm{\tilde{f_1}-\tilde{f}_0}_2}
\leq \delta^{1/12}
\]
as the event $E_3$ fails.

\paragraph{Bounding $(\rom{6})$.}
For $(\rom{6})$, we first note that $\phi_{\mu}(\tilde{f}_0,\tilde{g},\tilde{h}) = \phi_{\mu'}(\tilde{f}_0,\tilde{g},\tilde{h})$, namely we can switch from
the distribution $\mu$ to its $x$-merge $\mu'$.
Secondly, note that for fixed $J,\tilde{x},\tilde{y}$ and $\tilde{z}$, defining
\[
f''(x) = \cExpect{(x',y',z')\sim \mu'^{J}}{x' = x}{\tilde{g}(y')\tilde{h}(z')}
\]
it holds that
\[
\card{\phi_{\mu'}(\tilde{f}_0,\tilde{g},\tilde{h})}^2
=\card{\inner{\tilde{f}_0}{f''}}^2
\leq \norm{\tilde{f}_0}_2^2\norm{f''}_2^2
\leq 2\delta^{-\eta/10}\phi_{\mu'}(\overline{f''},\tilde{g},\tilde{h}),
\]
Thus, if the event $E$ holds we may use Theorem~\ref{thm:nonembed_deg_must_be_small} on $\mu'$ to get that
$\phi_{\mu'}(\overline{f''},\tilde{g},\tilde{h})\leq M\delta^{2\eta/3}$, hence
$\card{\phi_{\mu'}(\tilde{f}_0,\tilde{g},\tilde{h})}^2\leq 2M\delta^{17\eta/30}$
and so $\card{(\rom{6})}\leq \sqrt{2M}\delta^{17\eta/60}$.

Therefore, we get that $\card{(\rom{4})}\leq M'\delta^{\eta'}$ for $M'>0$ depending only on $M$ and $\eta'>0$ depending only on $\eta$,
hence $\card{(\rom{1})}$ is upper bounded by the same type of bound, and using~\eqref{eq:merge} the proof is concluded.

\subsection{Merging Symbols: Proof of Lemma~\ref{lem:merge2}}\label{sec:merge2}
The proof here is very similar to the proof of Lemma~\ref{lem:merge}, but as the roles of $x$ and $y$ in Theorem~\ref{thm:nonembed_deg_must_be_small}
are not symmetric we give it in detail.

Consider $\mathrm{T}\colon L_2(\Gamma;\mu_y)\to L_2(\Sigma\times \Phi;\mu_{x,z})$ defined as
\[
\mathrm{T}g(x,z) = \cExpect{(x',y',z')\sim \mu}{x'=x,z'=z}{g(y')},
\]
and let $\mathrm{S} = \mathrm{T}^{*}\mathrm{T}$. Note that $\mathrm{S}$ can also be viewed as a Markov chain over $\Gamma$,
and for $a\in \Gamma$ we denote by $\mathrm{S}a$ the distribution over $\Gamma$ of the neighbours of $a$ according to this
Markov chain. Let $G = (\Gamma, E)$ be the graph in which the edges are the support of the Markov chain $\mathrm{S}$, and
note that $x,z$ imply $y$ if and only if the number of connected components of $G$ is exactly $\card{\Gamma}$. Thus, we assume
that the number of connected components is $\ell < \card{\Gamma}$, and we write them as $C_1\cup\ldots\cup C_{\ell}$.

Note that $\mathrm{S}$ is a symmetric operator, and we may diagonalize it on each connected components separately. Namely,
we may find functions $\chi_{i,j}\colon \Sigma\to \mathbb{R}$ for $i=1,\ldots,\ell$ and $j=0,\ldots,\card{C_i} -1$
that are an orthogonal basis of $L_2(\Gamma;\mu_y)$ and
furthermore:
\begin{enumerate}
  \item $\chi_{i,j}$ is only supported on $C_i$.
  \item $\chi_{i,0}$ is constant on $C_i$.
  \item $\chi_{i,j}$ are eigenfunctions of $\mathrm{S}$ with eigenvalue $\lambda_{i,j}$.
\end{enumerate}
Thus, for all $i$ we have $\lambda_{i,0} = 1$, and for $j>0$ we have as in Lemma~\ref{lem:mossel_MC} that
$\lambda_{i,j}\leq 1 - \Omega_{\alpha,m}(1)$, and $\lambda_{i,j}\geq 0$ as $\mathrm{S}$ is positive semi-definite.

We consider the orthonormal basis $\chi_{\vec{i},\vec{j}}$ now over $L_2(\Gamma^{n}, \mu_y^{\otimes n})$ defined
by $\chi_{\vec{i},\vec{j}}(y) = \prod\limits_{k=1}^{n} \chi_{i_k,j_k}(y_k)$. The non-merged degree of a monomial
$\chi_{\vec{i},\vec{j}}$ is defined to be the number of $k$'s such that $j_k>0$ and denoted by $\text{non-merge-deg}(\chi_{\vec{i},\vec{j}})$. Thus, we can write
\[
g(y) = \sum\limits_{\vec{i},\vec{j}}\widehat{g}(\vec{i},\vec{j}) \chi_{\vec{i},\vec{j}}(y),
\qquad\qquad \text{ where }\qquad\widehat{g}(\vec{i},\vec{j}) = \inner{g}{\chi_{\vec{i},\vec{j}}}_{\mu_y}.
\]
Take $d = W\delta^{-1/2}$ for a parameter $W = W(\alpha,m)>0$ to be chosen later. We now would like to carry out the argument
which splits $g$ into high and low non-merged degrees, however we need to be more careful now so as to preserve boundedness.
\footnote{In Section~\ref{sec:merge}, we didn't really care about boundedness as we were able to gain it back later on by changing the function $f$.
However, in this case we cannot afford to change the function $g$ because we need to preserve its non-embedding stability to be small.}
Indeed, the argument below is morally the same, except that we apply a softer type of such split.

Consider the
Markov chain $\mathrm{R}$ on $\Gamma$ that on $y\in \Gamma$, takes $y' = y$ with probability $1-1/d$,
and otherwise we take the connected component $C_i$ in which $y$ lies, and then sample $y'\sim \mu_y$ conditioned on
$y'\in C_i$. Then $R \chi_{i,0} = \chi_{i,0}$ for all $i$, and for $j>0$ we have that $R\chi_{i,j} = \left(1-\frac{1}{d}\right) \chi_{i,j}$.
Write $g = g_1 + g_2$ where $g_1 = \mathrm{R}^{\otimes n} g$ and $g_2 = (\mathrm{I} - \mathrm{R}^{\otimes n})g$.
Then
\begin{equation}\label{eq:merge2}
\Expect{(x,y,z)\sim \mu}{f(x)g(y)h(z)}
=
\underbrace{\Expect{(x,y,z)\sim \mu}{f(x)g_1(y)h(z)}}_{(\rom{1})}
+
\underbrace{\Expect{(x,y,z)\sim \mu}{f(x)g_2(y)h(z)}}_{(\rom{2})},
\end{equation}
and we upper bound each term on the right hand side separately.

\paragraph{Bounding $(\rom{2})$.} For $g_2$, we have by Cauchy-Schwarz that
\begin{align*}
\card{\Expect{(x,y,z)\sim \mu^{\otimes n}}{f(x)g_2(y)h(z)}}^2
=\card{\inner{\overline{fh}}{\mathrm{T} g_2}_{\mu_{x,z}}}^2
\leq \norm{\mathrm{T} g_2}_{2; \mu_{x,z}}^2
=\inner{\mathrm{T} g_2}{\mathrm{T} g_2}_{\mu_{x,z}}
=\inner{\mathrm{S} g_2}{g_2}_{\mu_y}.
\end{align*}
To upper bound the last expression, we note that
\begin{align*}
\inner{\mathrm{S} g_2}{g_2}_{\mu_y}
&=\sum\limits_{\substack{\vec{i},\vec{j}}}\card{\widehat{g}(\vec{i},\vec{j})}^2\left(1-\left(1-\frac{1}{d}\right)^{{\sf non-merged-deg}(\chi_{\vec{i},\vec{j}})}\right)\prod\limits_{k=1}^{n} \lambda_{i,j}\\
&\leq \max_{r}\left(1-\left(1-\frac{1}{d}\right)^r\right)(1-\Omega_{\alpha,m}(1))^{r}.
\end{align*}
For $r > W \log(1/\delta)$, the second term is at most $\delta$ (for appropriately chosen $W$), and
for $r \leq W\log(1/\delta)$ the first term is at most $r/d\leq \delta^{1/4}$ for sufficiently small $\delta_0$,
hence $\card{(\rom{2})}\leq \delta^{1/4}$.

\paragraph{Bounding $(\rom{1})$.}
Here, we are going to use random restrictions so that almost all of the mass of $g_1$ will collapse to monomials of non-merged degree
$0$, at which point we could truncate off the part of non-merged degree exceeding $0$ and get a function that does not distinguish between
the distribution $\mu$ and the distribution $\mu'$. More precisely, let $s = \delta^{1/3}/d$ and choose $J\subseteq [n]$ by including
each element in it with probability $s$, sample $(\tilde{x},\tilde{y},\tilde{z})\sim \mu^{J}$ and define
\[
\tilde{f} = (f)_{\overline{J}\rightarrow \tilde{x}},
\qquad
\tilde{g_1} = (g_1)_{\overline{J}\rightarrow \tilde{y}},
\qquad
\tilde{h} = (h)_{\overline{J}\rightarrow \tilde{z}}.
\]
Let $M\in\mathbb{N}$ and $\eta>0$ be from Theorem~\ref{thm:nonembed_deg_must_be_small} for $\mu'$.
We will prove the statement for $\eta' = \eta/20$ and $M' = M+3$. Define the events:
\begin{enumerate}
  \item $E_1$: ${\sf NEStab}_{1-K\delta/s}(\tilde{g_1};\mu_y)\leq \delta^{2/3}$, where $K = K(m,\alpha)>0$.
  \item $E_2$: The mass of $\tilde{g_1}$ on monomials of non-merged degree more than $0$ exceeds $\delta^{1/6}$.
\end{enumerate}
We show that the probability of $E = E_1\cap \overline{E_2}$ is at least $1-2\delta^{1/6}\log(1/\delta)$; to do that, we use the union bound and
bound the probability of $\overline{E_1}$ and of $E_2$.
\paragraph{The event $E_1$.}
For the event $E_1$, let $V_{t}$ be the space of functions spanned by monomial of non-embedding degree
exactly $t$. We claim that $V_t$ is an invariant space of $\mathrm{R}^{\otimes n}$. Indeed, to see that
it suffices to show that if $\chi\colon \Gamma\to\mathbb{C}$ is orthogonal to all embedding functions
(that is, univariate functions in ${\sf Embed}_{\gamma}(\mu)$), then $\mathrm{R} \chi$ is orthogonal to all embedding
functions. Indeed, if $\chi'$ is an embedding funciton then
\[
\inner{\mathrm{R} \chi}{\chi'} = \inner{\chi}{\mathrm{R}^{*} \chi'} = \inner{\chi}{\mathrm{R}\chi'} = \inner{\chi}{\chi'} = 0,
\]
where we used the fact that $\mathrm{R}$ is self adjoint as it is an averaging operator corresponding to a reversible Markov chain.
Thus, writing $g = \sum\limits_{t} g_{=t}$ where $g_{=t}$ is in $V_{t}$, we get that
\begin{align*}
{\sf NEStab}_{1-\delta}(g_1)
=
{\sf NEStab}_{1-\delta}(\sum\limits_{t} \mathrm{R}^{\otimes n} g_{=t})
&=
\sum\limits_{t,t'}\inner{\mathrm{R}^{\otimes n} g_{=t'}}{\mathrm{T}_{\text{non-embed}, 1-\delta}\mathrm{R}^{\otimes n} g_{=t}}\\
&=
\sum\limits_{t,t'}\inner{\mathrm{R}^{\otimes n} g_{=t'}}{(1-\delta)^t\mathrm{R}^{\otimes n} g_{=t}},
\end{align*}
where we used the fact that $\mathrm{R}^{\otimes n} g_{=t}\in V_{t}$ and the fact that $V_{t}$ is an eigenspace of $\mathrm{T}_{\text{non-embed}, 1-\delta}$
of eigenvalue $(1-\delta)^t$. The inner product is $0$ for all $t\neq t'$, hence we get that this is equal to
\[
\sum\limits_{t}(1-\delta)^{t}\norm{\mathrm{R}^{\otimes n} g_{=t}}_2^2
\leq
\sum\limits_{t}(1-\delta)^{t}\norm{g_{=t}}_2^2
={\sf NEStab}_{1-\delta}(g)
\leq \delta,
\]
so ${\sf NEStab}_{1-\delta}(g_1)\leq \delta$. Thus, by Lemma~\ref{lem:op_comparison_lemma} it holds that
$\Expect{}{{\sf NEStab}_{1-K\delta/s}(\tilde{g_1};\mu_y)}\leq {\sf NEStab}_{1-\delta}(g_1;\mu_y)\leq \delta$ for suitably chosen $K$,
hence by Markov's inequality $\Prob{}{E_2}\geq 1-\delta^{1/3}$.

\paragraph{The event $E_2$.}
The expected weight of $\tilde{g_1}$ on monomials of non-merged degree more than $0$ is at most
\begin{align*}
&\Expect{J,\tilde{y}}{\sum\limits_{\vec{i}\in [\ell]^J,\vec{j}}\widehat{\tilde{g_1}}(\vec{i},\vec{j})^2 1_{\chi_{\vec{i}, \vec{j}}\text{ has non-merge degree at least $1$}}}\\
&\qquad=\Expect{J}{\sum\limits_{\vec{i}\in [\ell]^n,\vec{j}}\widehat{g_1}(\vec{i},\vec{j})^2 1_{\sett{k\in [n]}{j_k>1}\cap J \neq\emptyset}}\\
&\qquad=\sum\limits_{\vec{i}\in [\ell]^n,\vec{j}}\widehat{g}(\vec{i},\vec{j})^2\left(1-\frac{1}{d}\right)^{\card{\sett{k\in [n]}{j_k>1}}}\Expect{J}{1_{\sett{k\in [n]}{j_k>1}\cap J \neq\emptyset}}.
\end{align*}
Denoting $S = \card{\sett{k\in [n]}{j_k>1}}$, we have that the expectation in consideration is equal to $1-\left(1-s\right)^S$ hence
the last sum is at most
\[
\max_{S}\left(1-\frac{1}{d}\right)^{S}\left(1-\left(1-s\right)^S\right).
\]
For $S\geq d\log(1/\delta)$, the first term is at most $\delta$ hence it is at most $\delta$, and for
$S<d\log(1/\delta)$ the second term is at most $sS\leq \log(1/\delta)\delta^{1/3}$. Overall we get that
the expected weight of $\tilde{g_1}$ on monomials of non-merged degree more than $0$ is at most $\log(1/\delta)\delta^{1/3}$.
Hence by Markov's inequality $\Prob{}{E_2}\leq \delta^{1/6}\log(1/\delta)$.

\skipi
We write
\[
(\rom{1})=
\underbrace{\Expect{J,\tilde{x},\tilde{y},\tilde{z}}{1_{\overline{E}} \phi_{\mu}(\tilde{f},\tilde{g_1},\tilde{h})}}_{(\rom{3})}
+
\underbrace{\Expect{J,\tilde{x},\tilde{y},\tilde{z}}{1_{E} \phi_{\mu}(\tilde{f},\tilde{g_1},\tilde{h})}}_{(\rom{4})}
\]
where $\phi_{\mu}(\tilde{f},\tilde{g_1},\tilde{h}) = \Expect{(x,y,z)\sim \mu^{J}}{\tilde{f}(x)\tilde{g_1}(y)\tilde{h}(z)}$.

\paragraph{Bounding $(\rom{3})$.}
For $(\rom{3})$ we have that $\card{(\rom{3})}\leq\Prob{}{\overline{E}}$
which is at most $\delta^{\eta'}$.

\paragraph{Bounding $(\rom{4})$.}
For $(\rom{4})$, let $\tilde{g_1}'\colon \Gamma^{J}\to\mathbb{C}$ be defined by
\[
\tilde{g_1}'(y) = \Expect{y'\sim y}{\tilde{g_1}(y')},
\]
where $y'$ is distributed as: for each coordinate $i$, sample $y_i'$ according to $\mu_y$ conditioned on it being in the connected component of $y_i$.
In words, we average over the connected components. Then the value of $\tilde{g_1}'$ is constant on all connected components, and we also have that
$\norm{\tilde{g_1}-\tilde{g_1}'}_2^2$ is the mass of $\tilde{g_1}$ on monomials of non-merge degree at least $1$, hence it is at most $\delta^{1/6}$.
It follows that
\[
{\sf NEStab}_{1-K\delta/s}(\tilde{g_1}';\mu_y)\leq {\sf NEStab}_{1-K\delta/s}(\tilde{g_1};\mu_y) + O(\norm{\tilde{g_1}-\tilde{g_1}'}_2)
\leq O(\delta^{1/12})
\leq \delta^{1/20},
\]
hence by Claim~\ref{claim:increase_noise_decrease_stab} we have ${\sf NEStab}_{1-\delta^{1/20}}(\tilde{g_1}';\mu_y)\leq \delta^{1/20}$.
We get that
\begin{align*}
\card{(\rom{4})}\leq
\Expect{J,\tilde{x},\tilde{y},\tilde{z}}{1_{E} \card{\phi_{\mu}(\tilde{f},\tilde{g_1},\tilde{h})}}
&\leq
\Expect{J,\tilde{x},\tilde{y},\tilde{z}}{1_{E} \left(\card{\phi_{\mu}(\tilde{f},\tilde{g_1}',\tilde{h})} + O(\norm{\tilde{g_1}-\tilde{g_1}'}_2)\right)}\\
&=
\Expect{J,\tilde{x},\tilde{y},\tilde{z}}{1_{E} \left(\card{\phi_{\mu'}(\tilde{f},\tilde{g_1}',\tilde{h})} + O(\norm{\tilde{g_1}-\tilde{g_1}'}_2)\right)},
\end{align*}
where in the last transition we used the fact that $\tilde{g_1}'$ is constant on connected components hence
$\phi_{\mu}(\tilde{f},\tilde{g_1}',\tilde{h}) = \phi_{\mu'}(\tilde{f},\tilde{g_1}',\tilde{h})$. As
$\card{\phi_{\mu'}(\tilde{f},\tilde{g_1}',\tilde{h})}\leq M\delta^{\eta/20}$ and $\norm{\tilde{g_1}-\tilde{g_1}'}_2\leq \delta^{1/12}$, we get
that $\card{(\rom{4})}\leq (M+1)\delta^{\eta/20}$.

\paragraph{Finishing the proof.}
Together, we get that $\card{(\rom{1})}\leq (M+2)\delta^{\eta'}$ ; combining with the bound on $(\rom{2})$ and plugging into~\eqref{eq:merge2}
finishes the proof.

\section{Theorem~\ref{thm:nonembed_homogenous'} Implies Theorem~\ref{thm:nonembed_deg_must_be_small_rephrase_maximal_relaxed}}\label{sec:truncate}
In this section we show that Theorem~\ref{thm:nonembed_homogenous'} Implies Theorem~\ref{thm:nonembed_deg_must_be_small_rephrase_maximal_relaxed}.
The argument here is analogous to the argument in~\cite[Section A]{BKMcsp2}, with one important difference. Therein, we used the fact that
$\mu$ has some sub-distribution that has no Abelian embedding, and this was used in the ``soft truncation'' steps
(to argue that terms with mis-match of degrees have negligible contribution). Here, we no longer have this property,
and instead we appeal to the maximality property of the same sub-distribution of $\mu$ (which we have ensured in our case). This is ultimately
the reason we have to introduce the additional complication that comes to accommodate the notion of maximality.

The argument proceeds by first splitting $g$ into soft non-embedding homogenous parts, which are functions that have most of their $\ell_2$
mass of characters of roughly the same non-embedding degree and are still bounded. Then, we softly truncate the non-embedding degree of $f$
to be of at most roughly of the same order (while keeping boundedness), and truncate the effective non-embedding to be roughly at least
that same order (while keeping boundedness). Finally, once all of this is done we will be able to apply harsh truncations on the resulting
replacements of $f$ and $g$ function and consider the homogenous parts, then do the same for $h$, and then appeal to
Theorem~\ref{thm:nonembed_homogenous'} to upper bound each one of these homogenous terms.

\skipi
Let $f,g,h$ be as in Theorem~\ref{thm:nonembed_deg_must_be_small_rephrase_maximal_relaxed}, and let $\rho_{j} = 1-2^{-j}\delta$. We take $\xi = \xi(m,\alpha)>0$
to be small enough.

\subsection{Softly Splitting $g$ into Softly (non-embedding) Homogenous Parts}
By the triangle inequality
\begin{align}
\card{\Expect{(x,y,z)\sim \mu^{\otimes n}}{f(x)g(y)h(z)}}
&\leq
\card{\Expect{(x,y,z)\sim \mu^{\otimes n}}{f(x)\mathrm{T}_{\text{non-embed}, \rho_0} g(y)h(z)}}\notag\\\label{eq:trunate1}
&+
\sum\limits_{j=0}^{\infty}
\card{\Expect{(x,y,z)\sim \mu^{\otimes n}}{f(x)(\mathrm{T}_{\text{non-embed}, \rho_{j+1}}-\mathrm{T}_{\text{non-embed}, \rho_j}) g(y)h(z)}}.
\end{align}
For the first term, as $f,h$ are $1$-bounded we have that it is at most
\[
\Expect{(x,y,z)\sim \mu^{\otimes n}}{\card{\mathrm{T}_{\text{non-embed}, \rho_0} g(y)}}
\leq \sqrt{\Expect{(x,y,z)\sim \mu^{\otimes n}}{\card{\mathrm{T}_{\text{non-embed}, \rho_0} g(y)}^2}}
=\sqrt{{\sf NEStab}_{\rho_0^2}(g)}
\leq
\sqrt{{\sf NEStab}_{\rho_0}(g)},
\]
which is at most $\sqrt{\delta}$ by assumption; in the last inequality we used Claim~\ref{claim:increase_noise_decrease_stab}.
For the second term on the right hand side of~\eqref{eq:trunate1} we show:
\begin{claim}\label{claim:soft_truncate_main}
There are $M$, $\delta_0>0$ and $\eta>0$
depending only on $m$ and $\alpha$ such that
for all $0<\delta\leq \delta_0$
\[
\card{\Expect{(x,y,z)\sim \mu^{\otimes n}}{f(x)(\mathrm{T}_{\text{non-embed}, \rho_{j+1}}-\mathrm{T}_{\text{non-embed}, \rho_j}) g(y)h(z)}}
\leq M(1-\rho_j)^{\eta}.
\]
\end{claim}
Given Claim~\ref{claim:soft_truncate_main}, the proof is quickly concluded. Summing up over $j$ yields that the second term above is at most
\[
M\sum\limits_{j=0}^{\infty} 2^{-j\eta}\delta^{\eta}
\leq M\frac{1}{1-2^{-\eta}} \delta^{\eta},
\]
hence $\card{\Expect{(x,y,z)\sim \mu^{\otimes n}}{f(x)g(y)h(z)}}\leq M'\delta^{\eta'}$ as desired.

\skipi
The rest of this section is devoted to the proof of Claim~\ref{claim:soft_truncate_main}, and we fix some $j\geq 0$.

\subsection{Softly Truncating the (non-embedding) Degree of $f$ from Above}
For the simplicity of notation
we write $\rho = \rho_j$ and $\mathrm{T}' = \mathrm{T}_{\text{non-embed}, \rho_{j+1}}-\mathrm{T}_{\text{non-embed}, \rho_j}$. Thus,
\begin{align}\label{eq:trunate5}
  &\card{\Expect{(x,y,z)\sim \mu^{\otimes n}}{f(x)\mathrm{T}' g(y)h(z)}}\leq \notag\\
  &
  \underbrace{\card{\Expect{(x,y,z)\sim \mu^{\otimes n}}{(I-\mathrm{T}_{\text{non-embed}, 1-\rho^{1+\xi}})f(x)\mathrm{T}' g(y)h(z)}}}_{(\rom{1})}
  +
  \underbrace{\card{\Expect{(x,y,z)\sim \mu^{\otimes n}}{\mathrm{T}_{\text{non-embed}, 1-\rho^{1+\xi}}f(x)\mathrm{T}' g(y)h(z)}}}_{(\rom{2})}.
\end{align}

\begin{claim}\label{claim:soft_trunc1}
  There are $M$, $\delta_0>0$ and $\eta>0$ depending only on $m,\alpha$ and $\xi$ such that
  for $0<\delta\leq \delta_0$ it holds that $(\rom{1})\leq M\rho^{\eta}$.
\end{claim}
\begin{proof}
We may upper bound $\card{(\rom{1})}$ by
\begin{align*}
&\card{\Expect{(x,y,z)\sim \mu^{\otimes n}}{(I-\mathrm{T}_{\text{non-embed}, 1-\rho^{1+\xi}})f(x) \mathrm{T}_{\text{non-embed}, 1-\rho} g(y)h(z)}}\\
&+
\card{\Expect{(x,y,z)\sim \mu^{\otimes n}}{(I-\mathrm{T}_{\text{non-embed}, 1-\rho^{1+\xi}})f(x) \mathrm{T}_{\text{non-embed}, 1-\rho/2} g(y)h(z)}}.
\end{align*}
Each one of these terms is upper bounded in the same way, and we demonstrate it on the first. We re-interpret it as
\[
\card{\Expect{(x,y,z)\sim \mu'^{\otimes n}}{(I-\mathrm{T}_{\text{non-embed}, 1-\rho^{1+\xi}})f(x) g(y)h(z)}},
\]
where the distribution $\mu'$ is defined by first sampling $(x',y',z')\sim \mu$, then $y\sim \mathrm{T}_{\text{non-embed}, 1-\rho} y'$ and
outputting $(x',y,z')$. A somewhat annoying feature of $\mu'$ is that there are atoms whose probability is of the order of $\rho$, and to
circumvent it we write $\mu' = \frac{\rho}{2}\mu'' + \left(1-\frac{\rho}{2}\right)\mu'''$ where $\mu''$ and $\mu'''$ are distributions as follows
:
(1) Recalling property 6b from Theorem~\ref{thm:nonembed_deg_must_be_small_rephrase_maximal_relaxed}, we take $\Sigma'$, $\Gamma'$ and $\Phi'$ from there;
the support of $\mu''$ is $\sett{(x,y,z)\in {\sf supp}(\mu)}{x\in\Sigma', y\in\Gamma', z\in\Phi'}$, (2) each atom in $\mu''$ has probability at least $\alpha'(\alpha) > 0$.
Write $f' = (I-\mathrm{T}_{\text{non-embed}, 1-\rho^{1+\xi}})f$.

Pick $J\subseteq_{\rho/2} [n]$, let $(\tilde{x},\tilde{y},\tilde{z})\sim \mu'''^{\overline{J}}$
and define
\[
f'' = (f')_{\overline{J}\rightarrow \tilde{x}},
\qquad
g' = (g)_{\overline{J}\rightarrow \tilde{y}},
\qquad
h' = (h)_{\overline{J}\rightarrow \tilde{z}}.
\]
Then sampling $(x',y',z')\sim \mu''^{J}$ we have
\[
\Expect{(x,y,z)\sim \mu'^{\otimes n}}{f'(x) g(y)h(z)}
=
\Expect{J, (\tilde{x},\tilde{y},\tilde{z})\sim \mu''^{\overline{J}}}
{\Expect{(x',y',z')\sim \mu''^{J}}{f''(x')g'(y)h'(z)}},
\]
so
\[
\card{\Expect{(x,y,z)\sim \mu'^{\otimes n}}{f'(x) g(y)h(z)}}
\leq
\Expect{J, (\tilde{x},\tilde{y},\tilde{z})\sim \mu''^{\overline{J}}}
{\card{\Expect{(x',y',z')\sim \mu''^{J}}{f''(x')g'(y)h'(z)}}}.
\]
Define the event $E$ that ${\sf NEStab}_{1-\rho^{\xi/4}}(f'')\leq \rho^{\xi/8}$, and write
\begin{align*}
&\Expect{J, (\tilde{x},\tilde{y},\tilde{z})\sim \mu''^{\overline{J}}}
{\card{\Expect{(x',y',z')\sim \mu''^{J}}{f''(x')g'(y)h'(z)}}}
=\\
&\underbrace{\Expect{J, (\tilde{x},\tilde{y},\tilde{z})\sim \mu''^{\overline{J}}}
{1_E\card{\Expect{(x',y',z')\sim \mu''^{J}}{f''(x')g'(y)h'(z)}}}}_{(\rom{3})}
+
\underbrace{\Expect{J, (\tilde{x},\tilde{y},\tilde{z})\sim \mu''^{\overline{J}}}
{1_{\overline{E}}\card{\Expect{(x',y',z')\sim \mu''^{J}}{f''(x')g'(y)h'(z)}}}}_{(\rom{4})}.
\end{align*}
\paragraph{Upper bounding $(\rom{4})$.}
We will show that the probability of
$\overline{E}$ is close to $0$. First, we estimate the non-embedding stability of $f$:
\[
{\sf NEStab}_{1-\rho^{1+\xi/2}}(f')
\leq \norm{\mathrm{T}_{\text{non-embed}, 1-\rho^{1+\xi/2}}(I-\mathrm{T}_{\text{non-embed}, 1-\rho^{1+\xi}})f}_2,
\]
and as the eigenvalues of $\mathrm{T}_{\text{non-embed}, 1-\rho^{1+\xi/2}}(I-\mathrm{T}_{\text{non-embed}, 1-\rho^{1+\xi}})$
are $(1-\rho^{1+\xi/2})^d(1-(1-\rho^{1+\xi})^d)$, it follows that the last $2$-norm is at most the maximum of that over $d\in\mathbb{N}$.
If $d\geq \rho^{-(1+0.75\xi)}$, then $(1-\rho^{1+\xi/2})^d\leq e^{-\Omega(\rho^{-\xi/4})}\leq \rho^{\xi/4}$ for small enough $\delta_0$.
If $d\leq \rho^{-(1+0.75\xi)}$, then by Bernouli's inequality $1-(1-\rho^{1+\xi})^d\leq \rho^{1+\xi}d\leq \rho^{\xi/4}$. Thus we
get that ${\sf NEStab}_{1-\rho^{1+\xi/2}}(f')\leq \rho^{\xi/4}$.

Appealing to Claim~\ref{claim:rr_nestab} we get that
\[
\Expect{J,\tilde{x}}{{\sf NEStab}_{1-\rho^{\xi/4}}(f'')}\leq {\sf NEStab}_{1-c\rho^{1+\xi/4}}(f')
\]
where $c>0$ is a constant depending only on $m$ and $\alpha$, and as $c\rho^{1+\xi/4}\geq \rho^{1+\xi/2}$ for sufficiently small
$\delta_0$ we get from Claim~\ref{claim:increase_noise_decrease_stab} that
${\sf NEStab}_{1-c\rho^{1+\xi/4}}(f')\leq {\sf NEStab}_{1-\rho^{1+\xi/2}}(f')\leq \rho^{\xi/4}$. Hence, by Markov's inequality
$\Prob{}{\overline{E}}\leq \rho^{\xi/8}$, and by $1$-boundedness of the functions
\[
\card{(\rom{4})}\leq \Prob{}{\overline{E}}\leq \rho^{\xi/8}.
\]

\paragraph{Upper bounding $(\rom{3})$.}
Fix $J$ and $(\tilde{x},\tilde{y},\tilde{z})$ for which $E$ holds.
Here, we are going to appeal to maximality conditions. Recall property 6b in Theorem~\ref{thm:nonembed_deg_must_be_small_rephrase_maximal_relaxed},
the condition there roughly said that we can embed in a subset of
\[
\sett{(x,y,z)\in {\sf supp}(\mu)}{x\in \Sigma', y\in \Gamma', z\in \Phi'}
\]
a maximal
distribution $\nu$, and the point is that in our $\mu''$ we will be able to embed a distribution $\nu'$ on $\Sigma'\times \Gamma\times \Phi'$
whose support strictly contains $\nu$ (and in which the probability of each atom is at least $\alpha''(\alpha,m)>0$).

Indeed, take the mapping $a$ and the distribution $\tilde{\nu}$ as in property 6b in Theorem~\ref{thm:nonembed_deg_must_be_small_rephrase_maximal_relaxed},
which we denote by $\nu$ for notational convenience. Let $\nu'$ be the distribution of $(x'',y'',z'')$ sampled as follows: first sample $(x',y',z')\sim \mu''$, sample $x''\sim a^{-1}(x')$
uniformly and take $y'' = y'$ and $z'' = z'$. Also, define $f'''(x'') = f''(a(x''))$; then
\begin{equation}\label{eq:trunate2}
(\rom{3}) = \Expect{(x'',y'',z'')\sim \nu'^{J}}{f'''(x'')g'(y'')h'(z'')}.
\end{equation}
Thus, the right hand side is an expectation as in Theorem~\ref{thm:nonembed_deg_must_be_small} over a distribution $\nu'$. It is clear
that $\nu'$ satisfies all of the conditions of Theorem~\ref{thm:nonembed_deg_must_be_small}. Also, if $(u,v,w)\in {\sf supp}(\nu)$,
then $(a(u),v,w)\in {\sf supp}(\mu'')$, and so by definition of $\nu'$ we have that $(u,v,w)\in {\sf supp}(\nu')$, hence
${\sf supp}(\nu)\subseteq {\sf supp}(\nu')$. Next, we argue that this is a strict containment. Indeed, by property 6b(vi)
there are distinct $v,v'\in \Gamma$ such that $\gamma(v) = \gamma(v')$
(here $\gamma$ is the master embedding of $y$).
We take $u,u'\in \Sigma''$ and $w,w'\in \Phi'$ such that $(u,v,w), (u',v',w')\in {\sf supp}(\nu)$, and note that as in $\nu$ the value of any two
coordinates implies the last, it follows that $(u,v',w)\not\in {\sf supp}(\nu)$. However, by definition of $\mu''$ we have that
as $(a(u), v, w)$ is in the support of $\mu$, and as $\gamma(v) = \gamma(v')$ it follows that $(a(u),v',w)$ is in the support of $\mu$,
hence $(a(u), v', w)$ is in the support of $\mu''$ and so $(u,v',w)$ is in the support of $\nu'$.

It follows that the expectation on the right hand side of~\eqref{eq:trunate2} is an expectation with respect to a distribution $\nu'$ whose support
strictly contains the support of a maximal distribution. Also, by Claim~\ref{claim:increase_noise_decrease_stab}
\[
{\sf NEStab}_{1-\rho^{\xi/8}, \nu'}(f'''; \nu'_x)
\leq
{\sf NEStab}_{1-\rho^{\xi/4}, \nu'}(f'''; \nu'_x)
=
{\sf NEStab}_{1-\rho^{\xi/4}, \mu''}(f''; \mu''_x)
\leq \rho^{\xi/8},
\]
hence by maximality we get that $\card{(\rom{3})}\leq M\rho^{\eta\xi/8}$ for some $M$ and $\eta>0$
depending only on $m$ and $\alpha$.
\end{proof}

\subsection{Softly Truncating the Effective (non-embedding) Degree of $f$ from Below}
Looking at $(\rom{2})$, we write
$f' = \mathrm{T}_{\text{non-embed}, 1-\rho^{1+\xi}}f$
and
$g' = \mathrm{T}' g$. By the triangle inequality we get
\begin{align}\label{eq:trunate6}
(\rom{2})
&=\underbrace{\card{\Expect{(x,y,z)\sim \mu^{\otimes n}}{\mathrm{E}_{\text{non-embed}, 1-\rho^{1-\xi}}f'(x) g'(y)h(z)}}}_{(\rom{5})}\notag\\
&+
\underbrace{\card{\Expect{(x,y,z)\sim \mu^{\otimes n}}{(I-\mathrm{E}_{\text{non-embed}, 1-\rho^{1-\xi}})f'(x)g'(y)h(z)}}}_{(\rom{6})}.
\end{align}
\begin{claim}\label{claim:soft_trunc2}
  There are $M$, $\delta_0>0$ and $\eta>0$ depending only on $m,\alpha$ and $\xi$
  such that for $0<\delta\leq \delta_0$ it holds that $(\rom{5})\leq M\rho^{\eta}$.
\end{claim}
\begin{proof}
  We re-interpret the expectation in $(\rom{5})$ as
  \[
  \Expect{(x',y',z')\sim \mu'^{\otimes n}}{f'(x') g'(y')h(z')},
  \]
  where the distribution $\mu'$ is defined as first sampling $(x,y,z)\sim \mu$, then $x'\sim \mathrm{E}_{\text{non-embed}, 1-\rho^{1-\xi}} x$
  and outputting $(x',y',z')$ where $y' = y$ and $z' = z$. Again, a somewhat annoying feature of $\mu'$ is that there are atoms with probability
  $\rho^{1-\xi}$, and to circumvent it we again use random restrictions.

  More precisely, we write $\mu' = \frac{1}{2}\rho^{1-\xi} \mu'' + \left(1-\frac{1}{2}\rho^{1-\xi}\right)\mu'''$ where $\mu''$ and
  $\mu'''$ are distributions as follows:
  (1) Recalling property 6b from Theorem~\ref{thm:nonembed_deg_must_be_small_rephrase_maximal_relaxed}, we take $\Sigma'$ and $\Phi'$ from there;
  the support of $\mu''$ is $\sett{(x,y,z)\in {\sf supp}(\mu)}{x\in\Sigma', z\in\Phi'}$, (2) each atom in $\mu''$ has probability at least $\alpha'(\alpha) > 0$.

  Pick $J\subseteq_{\rho^{1-\xi}/2} [n]$, let $(\tilde{x},\tilde{y},\tilde{z})\sim \mu'''^{\overline{J}}$
  and define
\[
f'' = (f')_{\overline{J}\rightarrow \tilde{x}},
\qquad
g'' = (g')_{\overline{J}\rightarrow \tilde{y}},
\qquad
h' = (h)_{\overline{J}\rightarrow \tilde{z}}.
\]
Then sampling $(x',y',z')\sim \mu''^{J}$ we have
\[
\Expect{(x,y,z)\sim \mu'^{\otimes n}}{f'(x) g'(y)h(z)}
=
\Expect{J, (\tilde{x},\tilde{y},\tilde{z})\sim \mu''^{\overline{J}}}
{\Expect{(x',y',z')\sim \mu''^{J}}{f''(x')g''(y)h'(z)}},
\]
so
\begin{equation}\label{eq:trunate3}
\card{\Expect{(x,y,z)\sim \mu'^{\otimes n}}{f'(x) g'(y)h(z)}}
\leq
\Expect{J, (\tilde{x},\tilde{y},\tilde{z})\sim \mu''^{\overline{J}}}
{\card{\Expect{(x',y',z')\sim \mu''^{J}}{f''(x')g''(y)h'(z)}}}.
\end{equation}

Our goal is to use the maximality property from 6b in Theorem~\ref{thm:nonembed_deg_must_be_small_rephrase_maximal_relaxed} to bound this. Towards this end, first note
that
\[
{\sf NEStab}_{1-\rho^{1-\xi/2}}(g')
=\inner{g'}{\mathrm{T}_{\text{non-embed}, 1-\rho^{1-\xi/2}}g'}
\leq \norm{\mathrm{T}_{\text{non-embed}, 1-\rho^{1-\xi/2}}g'}_2,
\]
where the last inequality is Cauchy-Schwarz (and using the fact that $g'$ has $2$-norm at most $1$). Recalling the definition of
$g'$,
\[
\norm{\mathrm{T}_{\text{non-embed}, 1-\rho^{1-\xi/2}}g'}_2
=\norm{\mathrm{T}_{\text{non-embed}, 1-\rho^{1-\xi/2}}(\mathrm{T}_{\text{non-embed}, 1-\rho/2}-\mathrm{T}_{\text{non-embed}, 1-\rho})g}_2,
\]
which is at most the largest eigenvalue of $\mathrm{T}_{\text{non-embed}, 1-\rho^{1-\xi/2}}(\mathrm{T}_{\text{non-embed}, 1-\rho/2}-\mathrm{T}_{\text{non-embed}, 1-\rho})$.
The eigenvalues of this operator are $(1-\rho^{1-\xi/2})^r\left((1-\rho/2)^r - (1-\rho)^r\right)$ for $r\in\mathbb{N}$, and we argue they are all at most $\rho^{\xi/4}$.
Indeed, if $r\geq \rho^{1-\xi/4}$ then the first term is at most that (and the second one is at most $1$), and otherwise the first term is at most $1$ and
the second one is at most $(1-(1-r\rho))\leq r\rho = \rho^{\xi/4}$.

Hence we get that ${\sf NEStab}_{1-\rho^{1-\xi/2}}(g')\leq \rho^{\xi/4}$. Let $E$ be the event that
${\sf NEStab}_{1-\rho^{\xi/4}}(g'')\leq \rho^{\xi/8}$; by Claim~\ref{claim:rr_nestab} we get that
\[
\Expect{J,\tilde{y}}{{\sf NEStab}_{1-\rho^{\xi/4}}(g'')}\leq {\sf NEStab}_{1-c\rho^{1-3\xi/4}}(g')
\leq
{\sf NEStab}_{1-\rho^{1-\xi/2}}(g')
\leq \rho^{\xi/4}
\]
(where we used Claim~\ref{claim:increase_noise_decrease_stab} and also that $\rho$ is sufficiently small as $\delta_0$ is sufficiently small),
so by Markov's inequality $\Prob{}{E}\geq 1-\rho^{\xi/8}$. Hence, splitting the right hand side of~\eqref{eq:trunate3} according to whether $E$ holds or not,
we get that it is at most
\[
\underbrace{\Expect{\substack{J\\ (\tilde{x},\tilde{y},\tilde{z})\sim \mu''^{\overline{J}}}}
{1_E\card{\Expect{(x',y',z')\sim \mu''^{J}}{f''(x')g''(y)h'(z)}}}}_{(\rom{7})}
+
\underbrace{
\Expect{\substack{J\\ (\tilde{x},\tilde{y},\tilde{z})\sim \mu''^{\overline{J}}}}
{1_{\overline{E}}\card{\Expect{(x',y',z')\sim \mu''^{J}}{f''(x')g''(y)h'(z)}}}}_{(\rom{8})}.
\]
We have $\card{(\rom{8})}\leq 4\Prob{}{\overline{E}}\leq 4\rho^{\xi/8}$ (as $f''$ and $g''$ are $2$-bounded and $h'$ is $1$-bounded).

For $(\rom{7})$ we use maximality.
Take the mapping $a$ and the distribution $\nu$ as in property 6b in Theorem~\ref{thm:nonembed_deg_must_be_small_rephrase_maximal_relaxed},
and let $\nu'$ be the distribution of $(x'',y'',z'')$ sampled as follows: first sample $(x',y',z')\sim \mu''$, sample $x''\sim a^{-1}(x')$
uniformly and take $y'' = y'$ and $z'' = z'$. Also, define $f'''(x'') = f''(a(x''))$; then
\begin{equation}\label{eq:trunate4}
(\rom{7}) = \Expect{(x'',y'',z'')\sim \nu'^{J}}{f'''(x'')g'(y'')h'(z'')}.
\end{equation}
Thus, the right hand side is an expectation as in Theorem~\ref{thm:nonembed_deg_must_be_small} over a distribution $\nu'$. It is clear
that $\nu'$ satisfies all of the conditions of Theorem~\ref{thm:nonembed_deg_must_be_small}. Also, if $(u,v,w)\in {\sf supp}(\nu)$,
then $(a(u),v,w)\in {\sf supp}(\mu'')$, and so by definition of $\nu'$ we have that $(u,v,w)\in {\sf supp}(\nu')$, hence
${\sf supp}(\nu)\subseteq {\sf supp}(\nu')$. Next, we argue that this is a strict containment.

Take distinct
$u,u'\in\Sigma_{{\sf modest}}$ and $u_{{\sf pre}} \in a^{-1}(u)$, $u_{{\sf pre}}'\in a^{-1}(u')$,
and take $v,v'\in\Gamma$ and $w,w'\in \Phi'$ such that $(u_{{\sf pre}},v,w)$ and $(u_{{\sf pre}}',v',w')$ are in ${\sf supp}(\nu)$.
Then $(u_{{\sf pre}}',v,w)\not\in {\sf supp}(\nu)$ as in $\nu$ it is the case that the value of two coordinates implies the third.
On the other hand, it holds that $(u',v,w)\in {\sf supp}(\mu'')$ and so $(u_{{\sf pre}}',v,w)\in {\sf supp}(\nu')$.

Thus, for every $J$ and $\tilde{x},\tilde{y}$ and $\tilde{z}$ such that $E$ hold we have that $(\rom{7})$ is an expectation
with respect to a distribution satisfying the conditions of Theorem~\ref{thm:nonembed_deg_must_be_small} whose support
strictly contains the support of a maximal distribution, hence $\card{(\rom{7})}\leq M' \rho^{\eta\xi/8}$ for $\eta = \eta(\alpha,m)>0$.

\end{proof}

\subsection{Harsh Truncations}
\begin{claim}\label{claim:soft_trunc3}
  There are $M$ and $\xi_0,\delta_0>0$ depending only on $m$ and $\alpha$ such that if $0 < \xi\leq \xi_0$ and $0<\delta\leq \delta_0$, then
  $\card{(\rom{6})}\leq M 2^{-\frac{1}{\rho^{1-10\xi}}}$.
\end{claim}
\begin{proof}
Set $f'' = (I-\mathrm{E}_{\text{non-embed}, 1-\rho^{1-\xi}})f'$.
Let $g_1'$ be the part of $g'$ of non-embedding degree at most $d_1 = \frac{1}{\rho^{1-\xi}}$,
$g_2'$ be the part of $g'$ of non-embedding degree at least $d_2 = \frac{1}{\rho^{1+\xi}}$
and $g_3'$ be the part of $g'$ of non-embedding degree between $d_1$ and $d_2$.
Then $\norm{g_1'}_2\leq (1-\rho/2)^{d_1} - (1-\rho)^{d_1}\leq d_1\rho\leq \rho^{\xi}$
and $\norm{g_2'}_2\leq (1-\rho/2)^{d_2} - (1-\rho)^{d_2}\leq \rho^{\xi}$, hence
\[
\card{(\rom{6})}\leq
2\rho^{\xi} +
\card{\Expect{(x,y,z)\sim \mu^{\otimes n}}{f''(x)g_3'(y)h(z)}}.
\]
Similarly, let $f_1''$ be the part of $f''$ of effective non-embedding degree at most
$d_3 = \frac{1}{\rho^{1-2\xi}}$, $f_2''$ be the part of $f''$ of non-embedding degree more
than $d_4 = \frac{1}{\rho^{1+2\xi}}$ and $f_3''$ be the part of $f''$ of effective non-embedding
degree at least $d_3$ and non-embedding degree at most $d_4$. We have $\norm{f_1''}_2,\norm{f_2''}_2\leq \rho^{\xi}$
hence
\begin{align*}
\card{\Expect{(x,y,z)\sim \mu^{\otimes n}}{f''(x)g_3'(y)h(z)}}
&\leq
\norm{f''_1}_2\norm{g_3'}_2
+\norm{f''_2}_2\norm{g_3'}_2
+\card{\Expect{(x,y,z)\sim \mu^{\otimes n}}{f_3''(x)g_3'(y)h(z)}}\\
&\leq 2\rho^{\xi}
+
\card{\Expect{(x,y,z)\sim \mu^{\otimes n}}{f_3''(x)g_3'(y)h(z)}},
\end{align*}
hence
\[
\card{(\rom{6})}\leq
4\rho^{\xi}
+
\card{\Expect{(x,y,z)\sim \mu^{\otimes n}}{f_3''(x)g_3'(y)h(z)}}.
\]
We now partition $f_3''$, $g_3'$ and $h$ according to embedding degrees, non-embedding degrees and effective degrees.
Denoting by $\vec{D} = (D_a)_{a\in \hat{H}}$ a sequence representing embedding degrees,
we write $f_{3}'' = \sum\limits_{\vec{D} ,i,i'} f_{3,\vec{D},i,i'}''$ where $f_{3,\vec{D},i,i'}''$ is the part of $f''$ of
$a$-embedding degree equal to $D_a$ for all $a\in\hat{H}$, effective non-embedding of degree exactly $i$ and non-embedding degree exactly $i'$. We also write
$g_{3}' = \sum\limits_{\vec{D}',j} g_{3,J,j}'$ where $g_{3,\vec{D}',j}'$ is the part of $g_{3}'$ of $a$-embedding degree exactly $D_a$ for all $a$ and
non-embedding degree exactly $j$. Finally, we write $h = \sum\limits_{\vec{D}'',k} h_{\vec{D}'',k}$ where $h_{\vec{D}'',k}$ is the part of $h$ of
$a$-embedding degree equal to $D_a''$ for all $a\in \hat{H}$ and non-embedding degree exactly $k$. Then
\[
\Expect{(x,y,z)\sim \mu^{\otimes n}}{f_3''(x)g_3'(y)h(z)}
=
\sum\limits_{\vec{D},\vec{D}',\vec{D}''}
\sum\limits_{i,i'=d_3}^{d_4}
\sum\limits_{j=d_1}^{d_2}
\sum\limits_{k=0}^{n}
\Expect{(x,y,z)\sim \mu^{\otimes n}}{f_{3,\vec{D},i,i'}''(x)g_{3,\vec{D}',j}'(y)h_{\vec{D}'',k}(z)}.
\]
In the rest of the argument, we upper bound the absolute value of the sum on the right hand side.
\paragraph{The contribution is zero unless embedding degrees nearly match.}
We first claim that unless $\card{D_a-D_a'}$ and $\card{D_a-D_a''}$ are each at most $10d_4$ for all $a$, this expectation is $0$.
To do so, we are going to expand each function according to the monomials, multiply this out and consider a term in this resulting expression.

Suppose that there is an $a\in\hat{H}$ such that $\card{D_a-D_a'}\geq 10d_4$, say $D_a\geq D_a'+10d_4$, then there must be at least
$D_a' - (D_a'+d_2) - d_4 \geq 1$ coordinates $j$ such that the term contains
$\chi_j(x_j)\chi_j'(y_j)\chi_j''(z_j)$ where
$\chi_j = a$,
$\chi_j' = a'$ for some $a'\in \hat{H}$ different from $a$, and
$\chi_{j}''$ is an embedding function (may be constant).
By independence it suffices to argue that
\[
\Expect{(x_j,y_j,z_j)\sim \mu}{\chi_j(x_j)\chi_{j}'(y_j)\chi_{j}''(z_j)} = 0,
\]
and this is clear as it is equal to
\[
\Expect{(x_j,y_j,z_j)}{a(\sigma(x_j))a'(\gamma(y_j))\chi_{j}''(z_j)}
= \Expect{(x_j,y_j,z_j)}{\overline{a(\gamma(y))}a'(\gamma(y_j))\overline{a(\phi(z))}\chi_{j}''(z_j)}
=\inner{a'}{a}\inner{\chi_{j}''}{a},
\]
where in the last equality we used independence of $y$ and $z$.
The case $D_a'\geq D_a+10d_4$ is nearly identical as well as the case that $\card{D_a-D_a''}\geq 10d_4$.

Next, note that if $k>i'+j$, then
$\Expect{(x,y,z)\sim \mu^{\otimes n}}{f_{3,I,i,i'}''(x)g_{3,J,j}'(y)h_{K,k}(z)} = 0$; this is
true because looking at $h'(z) = \cExpect{(x',y',z')\sim \mu^{\otimes n}}{z' = z}{f_{3,i,i'}''(x)g_{3,j}'(y)}$,
we get that $h'$ is a function of non-embedding degree at most $i'+j$, and the expectation is
$\inner{h}{\overline{h'}}$.

Denoting by $\mathcal{D}$ the set of tuples of degree sequences $(\vec{D}, \vec{D}', \vec{D}'')$
such that $\card{D_a-D_a'},\card{D_a-D_a''}\leq 10d^4$, we get that
\begin{align*}
&\card{\Expect{(x,y,z)\sim \mu^{\otimes n}}{f_3''(x)g_3'(y)h(z)}}\\
&\qquad\qquad\qquad\leq
\sum\limits_{(\vec{D}, \vec{D}', \vec{D}'') \in \mathcal{D}}
\sum\limits_{i,i'=d_3}^{d_4}
\sum\limits_{j=d_1}^{d_2}
\sum\limits_{k=0}^{d_2+d_4}
\card{\Expect{(x,y,z)\sim \mu^{\otimes n}}{f_{3,\vec{D},i,i'}''(x)g_{3,\vec{D}',j}'(y)h_{\vec{D}'',k}(z)}}\\
&\qquad\qquad\qquad\leq
\sum\limits_{(\vec{D}, \vec{D}', \vec{D}'') \in \mathcal{D}}
\sum\limits_{i,i'=d_3}^{d_4}
\sum\limits_{j=d_1}^{d_2}
\sum\limits_{k=0}^{d_2+d_4}
\beta_{n,i,i'}[\mu]\norm{f_{3,\vec{D},i,i'}''}_2\norm{g_{3,\vec{D}',j}'}_2\norm{h_{\vec{D}'',k}}_2,
\end{align*}
where by abuse of notation we denote by $D, D'$ and $D''$ the sum of entries in $\vec{D}, \vec{D}'$ and $\vec{D}''$ respectively.
By Theorem~\ref{thm:nonembed_homogenous'} we get that provided that $\xi$ is small enough it holds that
$\beta_{n,i,i'}[\mu]\leq (1+c)^{-i^{1-2\xi}}$, and plugging that above yields that
\begin{align*}
&\card{\Expect{(x,y,z)\sim \mu^{\otimes n}}{f_3''(x)g_3'(y)h(z)}}\\
&\qquad\leq
\sum\limits_{(\vec{D}, \vec{D}', \vec{D}'') \in \mathcal{D}}
\sum\limits_{i,i'=d_3}^{d_4}
\sum\limits_{j=d_1}^{d_2}
\sum\limits_{k=0}^{d_2+d_4}
(1+c)^{-i^{1-2\xi}}\norm{f_{3,\vec{D},i,i'}''}_2\norm{g_{3,\vec{D}',j}'}_2\norm{h_{\vec{D}'',k}}_2\\
&\qquad\leq
\sum\limits_{(\vec{D}, \vec{D}', \vec{D}'') \in \mathcal{D}}
\sum\limits_{i,i'=d_3}^{d_4}
\sum\limits_{j=d_1}^{d_2}
\sum\limits_{k=0}^{d_2+d_4}
(1+c)^{-i^{1-2\xi}}\norm{f_{3,\vec{D}}''}_2\norm{g_{3,\vec{D}'}'}_2\norm{h_{\vec{D}''}}_2,
\end{align*}
where $f_{3,\vec{D}}''$ is the part of $f_3''$ of embedding degrees $\vec{D}$, and similarly
$g_{3,\vec{D}'}'$ and $h_{\vec{D}''}$; here used Parseval.
We get that the above is at most
\[
\card{\Expect{(x,y,z)\sim \mu^{\otimes n}}{f_3''(x)g_3'(y)h(z)}}
\leq
\sum\limits_{(\vec{D}, \vec{D}', \vec{D}'') \in \mathcal{D}}
\norm{f_{3,\vec{D}}''}_2\norm{g_{3,\vec{D}'}'}_2\norm{h_{\vec{D}''}}_2
\sum\limits_{i,i'=d_3}^{d_4}
\sum\limits_{j=d_1}^{d_2}
\sum\limits_{k=0}^{d_2+d_4}
(1+c)^{-i^{1-2\xi}};
\]
the inner sum is at most $M(1+c)^{-d_3^{1-\xi}}$ for $M$ depending only on $c$, and
\begin{align*}
\sum\limits_{(\vec{D}, \vec{D}', \vec{D}'') \in \mathcal{D}}
\norm{f_{3,\vec{D}}}_2\norm{g_{3,\vec{D}'}'}_2\norm{h_{\vec{D}''}}_2
&\leq
\sum\limits_{\vec{D}} \norm{f_{3,\vec{D}}''}_2
\sum\limits_{\vec{D}', \vec{D}'': (\vec{D}, \vec{D}', \vec{D}'') \in \mathcal{D}}\norm{g_{3,\vec{D}'}'}_2\\
&\leq
\sqrt{\sum\limits_{\vec{D}} \norm{f_{3,\vec{D}}}_2^2}
\sqrt{\sum\limits_{\vec{D}}\left(\sum\limits_{\vec{D}', \vec{D}'': (\vec{D}, \vec{D}', \vec{D}'') \in \mathcal{D}}\norm{g_{3,\vec{D}'}'}_2\right)^2}\\
&\leq
(20d_4)^{\card{H}}\sqrt{\sum\limits_{\vec{D}} \norm{f_{3,\vec{D}}''}_2^2}\sqrt{\sum\limits_{(\vec{D}, \vec{D}', \vec{D}'') \in \mathcal{D}}\norm{{g_{3,\vec{D}'}'}^{=J}}_2^2}\\
&\leq
(20d_4)^{2\card{H}}\sqrt{\sum\limits_{\vec{D}} \norm{f_{3,\vec{D}}''}_2^2}\sqrt{\sum\limits_{\vec{D}}\norm{g_{3,\vec{D}}'}_2^2}\\
&=(20d_4)^{2\card{H}} \norm{f_3''}_2\norm{g_3'}_2,
\end{align*}
which is at most $(20d_4)^{2\card{H}}$. We used Cauchy-Schwarz multiple times, the fact that once we fix $\vec{D}$ there are at most $(20d_4)^{2\card{H}}$
degree vectors $\vec{D}', \vec{D}''$ such that $(\vec{D},\vec{D}',\vec{D}'')\in\mathcal{D}$; in the end we also used Parseval.
Plugging above, we conclude that
\[
\card{
\Expect{(x,y,z)\sim \mu^{\otimes n}}{f_3''(x)g_3'(y)h(z)}}
\leq (20d_4)^{2m}\cdot M'' (1+c)^{-d_3^{1-\xi}}
\leq M'''(1+c')^{-d_3^{1-\xi}}
\]
where $M'''$, $\xi$ and $c$ depend only on $m$ and $\alpha$, and we are done.
\end{proof}

\subsection{Proof of Claim~\ref{claim:soft_truncate_main}}
We are now ready to prove Claim~\ref{claim:soft_truncate_main}.
Fixing $m$ and $\alpha$, we take $M_1,\xi_1$ and $\delta_1$ from Claim~\ref{claim:soft_trunc3}.
We take $\xi = \min(\xi_1,1/100)$, then $M_2, \delta_2$ and $\eta_1$ from Claim~\ref{claim:soft_trunc1}
and $M_3,\delta_3, \eta_2$ from Claim~\ref{claim:soft_trunc2}. Finally we pick $M = M_1+M_2+M_3$,
$\eta = \min(\eta_1,\eta_2, 1/100)$ and $\delta_0 = \min(\delta_1,\delta_2,\delta_3,1/100)$.

Using~\eqref{eq:trunate5}, we upper bound the absolute value of left hand side of Claim~\ref{claim:soft_truncate_main} by
$(\rom{1}) + (\rom{2})$, and by Claim~\ref{claim:soft_trunc2} we have that $(\rom{1})\leq M_3\rho^{\eta}$. Using~\eqref{eq:trunate6}
we bound $(\rom{2})\leq (\rom{5}) + (\rom{6})$ and by Claim~\ref{claim:soft_trunc2} we have
$(\rom{5})\leq M_2\rho^{\eta}$. Finally, by Claim~\ref{claim:soft_trunc3} we have that
$(\rom{6})\leq M_1\rho^{\eta}$ and the proof is complete.\qed

\section{SVD Decompositions Proofs}\label{sec:svd_apx}
	In this section, we prove the claims establishing the SVD decompositions. We remind the reader that throughout,
$I,J$ is a partition of $[n]$ where $\card{I} = n-1$ and $\card{J} = 1$.

\subsection{Homogeneity and Singular Value Decompositions}
The proof of our singular value decomposition will require a basic connection between such decompositions and
the various notions of homogeneity. This will be used multiple times, and therefore we abstract below.
\begin{definition}
    Let $B_1,\ldots, B_{\ell}\subseteq \set{g\colon \Gamma\to\mathbb{C}}$ be orthonormal sets and suppose that
    $B_1\cup\ldots\cup B_{\ell}$ is an orthonormal basis. Then we may write any $g\colon \Gamma^n\to\mathbb{C}$
    as
    \[
    g = \sum\limits_{v_{i_1},\ldots,v_{i_n}\in B_1\cup\ldots\cup B_{\ell}} \alpha_{i_1,\ldots,i_n} v_{i_1}\otimes v_{i_2}\otimes \ldots \otimes v_{i_n}
    \]
    Then the degree of $v_{i_1}\otimes v_{i_2}\otimes \ldots \otimes v_{i_n}$ with respect to $B_j$ is defined to be the number of
    $i_k$'s such that $v_{i_k}\in B_j$. Furthermore, we say that $g$ is degree $d$ homogenous with respect to $B_j$ if for each monomial
    $v_{i_1}\otimes v_{i_2}\otimes \ldots \otimes v_{i_n}$ in it with non-zero coefficient it holds that the degree of that monomial
    with respect to $B_j$ is $d$.
\end{definition}
\begin{claim}\label{claim:svd_invariant_spaces}
  Let $B_1,\ldots, B_{\ell}\subseteq \set{g\colon \Gamma^J\to\mathbb{C}}$ be orthonormal sets and suppose that
  $B_1\cup\ldots\cup B_{\ell}$ is an orthonormal basis. Given a function $g\colon \Gamma^{I}\times \Gamma^{J}\to\mathbb{C}$
  define the matrix $M\in \mathbb{C}^{\Gamma^{J}\times \Gamma^{I}}$ as $M(a,b) = g(a,b)$, and consider the matrix
  $MM^{*}\in \mathbb{C}^{\Gamma^{J}\times \Gamma^{J}}$.

  If $g$ is degree $d$ homogenous with respect to $B_j$, then ${\sf Span}(B_j)$ is an invariant space of $MM^{*}$.
\end{claim}
\begin{proof}
        For notational convenience, we assume that $J = \{n\}$ and $m = \card{\Gamma}$. Let $v\in B_{j}$, and write
		\begin{align*}
			(MM^{*} v)_{a}
			=\sum\limits_{b} (M M^{*})_{a,b}v(b)
			=\sum\limits_{b} \sum\limits_{y_I\in \Gamma^{I}}M[a,y_I] M^{*}[y_I,b]v(b)
			&=\sum\limits_{b} \sum\limits_{y_I\in \Gamma^{I}}\overline{g(y_I,b)} g(y_I,a)v(b)\\
			&=\sum\limits_{y_I\in \Gamma^{I}} g(y_I,a)\sum\limits_{b}\overline{g(y_I,b)}v(b).
		\end{align*}
        Expanding $g$ according to the basis $(B_1\cup\ldots\cup B_{\ell})^{\otimes n}$ we may write
        \[
        g(y_I,b) = \sum\limits_{v'\in B_j}\tilde{g}_{v'}(y_I)v'(b) + \sum\limits_{v''\not\in B_j}\tilde{g}_{v''}(y_I)v''(b),
        \]
        where $\tilde{g}_{v'}$ is the part of $g$ in which the coordinate in $J$ contributes the vector $v'$. Plugging
        this above, we get that
        \[
        \sum\limits_{b}\overline{g(y_I,b)}v(b)
        =m\sum\limits_{v'\in B_j}\overline{\tilde{g}_{v'}(y_I)}\inner{v'}{v} + \sum\limits_{v''\not\in B_j}\overline{\tilde{g}_{v''}(y_I)}\inner{v''}{v}
        =m\sum\limits_{v'\in B_j}\overline{\tilde{g}_{v'}(y_I)}\inner{v'}{v},
        \]
        as $v$ and $v''$ are orthogonal. Plugging this above further, we conclude that
        \[
            (MM^{*} v)_{a}
            =
            m\sum\limits_{y_I\in \Gamma^{I}} g(y_I,a)
            \sum\limits_{v'\in B_j}\overline{\tilde{g}_{v'}(y_I)}\inner{v'}{v}
            =m^{n}
            \sum\limits_{v'\in B_j}\inner{g_{J\rightarrow a}}{\tilde{g}_{v'}}\inner{v'}{v}.
        \]
        Expanding $g$ again, we have
        $g_{J\rightarrow a} = \sum\limits_{\tilde{v}'\in B_j}\tilde{g}_{\tilde{v}'}\tilde{v}'(a) + \sum\limits_{\tilde{v}''\not\in B_j}\tilde{g}_{\tilde{v}''}\tilde{v}''(a)$,
        and so
        \[
        \inner{g_{J\rightarrow a}}{\tilde{g}_{v'}}
        =\sum\limits_{\tilde{v}'\in B_j}\inner{\tilde{g}_{\tilde{v}'}}{\tilde{g}_{v'}}\tilde{v}'(a)
        + \sum\limits_{\tilde{v}''\not\in B_j}\inner{\tilde{g}_{\tilde{v}''}}{\tilde{g}_{v'}}\tilde{v}''(a).
        \]
        Note that for $v'\in B_j$, the function $\tilde{g}_{v'}$ is degree $d-1$ homogenous with respect to $B_j$,
        whereas for $\tilde{v''}\not\in B_j$ the function $\tilde{g}_{\tilde{v}''}$ is degree $d$ homogenous with
        respect to $B_j$, and so $\inner{\tilde{g}_{\tilde{v}''}}{\tilde{g}_{v'}} = 0$. Thus,
        $\inner{g_{J\rightarrow a}}{\tilde{g}_{v'}} = \sum\limits_{\tilde{v}'\in B_j}\inner{\tilde{g}_{\tilde{v}'}}{\tilde{g}_{v'}}\tilde{v}'(a)$, and
        plugging that above (and re-arranging) we get that
        \[
        MM^{*} v  =
        \sum\limits_{\tilde{v}'\in B_j}\left(m^{n}\sum\limits_{v'\in B_j}\inner{\tilde{g}_{\tilde{v}'}}{\tilde{g}_{v'}}\inner{v'}{v}\right) \tilde{v}'
        \in {\sf Span}(B_j)
        \qedhere
        \]
\end{proof}
	\subsection{Proof of Claim~\ref{claim:svd}}
	\begin{proof}
        Recall that marginal distribution of $\mu$ on $y$ is uniform, and we denote $m = \card{\Gamma}$.
        We think of $g$ as a matrix $M$ in $\mathbb{C}^{\Gamma^{J}\times \Gamma^{I}}$, whose $(a,b)$ entry is
		$g(y_J = a, y_I = b)$. The decomposition stated by the claim is an appropriately chosen singular-value
		decomposition of $M$; below are the details for completeness
		
		Looking at $M M^{*} \in\mathbb{C}^{\Gamma^{J}\times \Gamma^{J}}$, we see that it is an $m\times m$ Hermitian matrix,
		hence we may find an eigenbasis $g_1',\ldots,g_m'$ of $\mathbb{R}^{\Gamma^{J}}$ with real non-negative eigenvalues $\lambda_1,\ldots,\lambda_m\geq 0$.
        We will use Claim~\ref{claim:svd_invariant_spaces} to find a more structured eigenbasis.

		By Claim~\ref{claim:svd_invariant_spaces} we conclude that each one of ${\sf span}(\{\chi\circ \gamma\})$ for $\chi\in \hat{H}$
        , as well as ${\sf span}(B_{{\sf non-embed}})$, are invariant spaces of $MM^{*}$. Therefore we may choose an
        an orthonormal eigenbasis $g_1',\ldots,g_m'$ in which each $g_i'$ is from one of these spaces, and we choose this.
        We also let $\kappa_r$ be the eigenvalue corresponding to $g_r'$. To choose $g_1,\ldots, g_m$, define
        $\tilde{g}_r = M^{*} g_r'$; first we note that $\tilde{g}_r$ are orthogonal:
		\[
		\inner{\tilde{g}_r}{\tilde{g}_{r'}} = \inner{M^{*} g_r'}{M^{*} g_{r'}'} = \inner{M M^{*} g_r'}{g_{r'}'} = \kappa_r\inner{g_r'}{g_{r'}'} = \kappa_r 1_{r\neq r'}
		\]
		This means that if we look only at the set $R$ of $r$'s such that $\kappa_r\neq 0$, then we get that $\{\tilde{g}_r\}_{r\in R}$ is
		orthogonal, and we choose $g_r = \tilde{g}_r/\sqrt{\kappa_r}$ (which has $2$-norm equal to $1$).
		
		We prove that
		\[
		M = \sum\limits_{r\in R} \sqrt{\kappa_r} g_r(y_I) \overline{g_r'(y_J)}.
		\]
		Define $M' = \sum\limits_{r\in R} \sqrt{\kappa_r} g_r(y_I) \overline{g_r'(y_J)}$, and note that
        for all $r\in R$ we have $M g_r' = M' g_r' = \sqrt{\kappa_r} g_r\neq 0$.
        As for $r\not\in R$, $g_r'$ is in the kernel of $M$ and also in the kernel of $M'$, and so
        $Mg_r' = 0 = M'g_r'$. This implies that $M = M'$.
		
		Next, we observe that
		\[
		\sum\limits_{r}\sqrt{\kappa_r}^{2}
		=\sum\limits_{r}{\kappa_r}
		=\frac{{\sf Tr}(M M^{*})}{m^n}
        =\frac{1}{m^n}\sum\limits_{b}(M^t M)_{b,b}
		=\norm{g}_2^2
		=1.
		\]
		
        Now we argue about the homogeneity properties of the $g_r$. All of these arguments are basically the same.
        For $\chi\in \hat{H}$, if $r$ is such that $g_r'$ is in ${\sf span}(\chi\circ \gamma)$, then by definition again
        \[
        g_r(y_I) = \frac{m}{\sqrt{\kappa_r}}\Expect{y_J}{\overline{g(y)}g_r'(y_J)},
        \]
		and expanding $g$ we note that on the right hand side only monomials in $g$ in which the variable from $J$ gives the embedding function $\chi\circ \gamma$ give non-zero
        contribution, and we see that $g_r$ is completely embedding homogenous and non-embedding homogenous.

        If $r$ is such that $g_r'$ is in ${\sf span}(B_{{\sf non-embed}})$, then by definition again
        \[
        g_r(y_I) = \frac{m}{\sqrt{\kappa_r}}\Expect{y_J}{\overline{g(y)}g_r'(y_J)},
        \]
		and expanding $g$ we note that on the right hand side only monomials in $g$ on which the variable from $J$ give a non-embedding function can contribute
        (the rest give $0$), hence $g_r$ is completely embedding homogenous and non-embedding homogenous.
	\end{proof}

	\subsection{Proof of Claim~\ref{claim:svd_effective}}
		We run the same argument as in the proof of Claim~\ref{claim:svd}, and we sketch it below.
        Let $P = \Gamma\times \Phi$.
        Recall that marginal distribution of $\mu$ on $y,z$ is uniform, and we denote $m = \card{P}$.
        We think of $F$ as a matrix $M$ in $\mathbb{C}^{P^{J}\times P^{I}}$, whose $(a,b)$ entry is
		$F((y_J, z_J) = a, (y_I, z_I) = b)$.

        Looking at $M M^{*} \in\mathbb{C}^{P^{J}\times P^{J}}$, we see that it is an $m\times m$ Hermitian matrix,
        and by Claim~\ref{claim:svd_invariant_spaces} we conclude that each one of ${\sf span}(\{W\circ \chi\circ \sigma\})$
        for $\chi\in\hat{H}$, as well as ${\sf span}(B_{{\sf non-embed}})$, are invariant spaces of $MM^{*}$. Hence we may choose an
        an orthonormal eigenbasis $F_1',\ldots,F_m'$ in which each $F_i'$ is from one of these spaces, and we choose this.
        All of the items follow exactly in the same way, except that we also argue about the non-embedding degrees.

        Indeed,
        defining $F_t$ analogously to there, we get that
        \[
        F_t(y_I,z_I) = \frac{m}{\sqrt{\psi_t}}\Expect{y_J,z_J}{\overline{F_t(y_I,z_I)}F_t'(y_J,z_J)}.
        \]
        Expanding $F$ we note that if $F_t\in {\sf span}(\chi\circ \sigma)$ for some $\chi\in\hat{H}$,
        then only monomials in $F$ in which the variable from $J$ gives the embedding function $\chi\circ \sigma$ give non-zero
        contribution, so $F_t$ is completely embedding homogenous and non-embedding homogenous of degree $d$.
        Otherwise, if $F_t$ is a non-embedding function, then only monomials in $F$ in which the variable from $J$ gives a non-embedding
        function give non-zero contribution, so $F_t$ is completely embedding homogenous and non-embedding homogenous of degree $d-1$.
        Also, the effective non-embedding degree drops by at most $1$, hence the effective non-embedding degree of $F_t$ is at least
        $d'-1$.
\qed

\section{Missing Proofs: the Direct Product Theorem}\label{sec:missing_dp}
\subsection{Proof of Claim~\ref{claim:binom}}
Expanding the left-hand side,
		\begin{align*}
			\frac{{N-n \choose qN-t}}{{N \choose qN}}
			& = \frac{ (N-n)! (qN)! (N-qN)!}{N! (qN-t)! (N-qN - n +t)!} \\
			& =\frac{qN\cdot(qN-1)\cdots (qN-t+1)\cdot (N-qN)\cdot (N-qN-1)\cdots(N-qN-n+t+1)}{N\cdot (N-1)\cdots (N-n+1)}\\
			&= q^t (1-q)^{n-t} \frac{N (N-1/q)\cdots (N-(t-1)/q) \cdot N(N-1/(1-q))\cdots (N-(n-t-1)/(1-q))}{N(N-1)\cdots (N-(n-1))}\\
			& =  q^t (1-q)^{n-t}  \frac{(1-1/qN)\cdots (1-(t-1)/qN) \cdot (1-1/(1-q)N)\cdots (1-(n-t-1)/(1-q)N)}{(1-1/N)\cdots (1-(n-1)/N)}\\
			& = q^t (1-q)^{n-t} \cdot \frac{e^{-\Theta(t^2/qN)} \cdot e^{-\Theta((n-t)^2/(1-q)N)} }{e^{-\Theta(n^2/N)} }\\
			& = q^t (1-q)^{n-t} (1\pm o(1)),
		\end{align*}
		and the claim follows.

\subsection{Proof of Claim~\ref{claim:bd_sd_dp}}
For simplicity, consider the following distribution  which is a refinement of the distribution ${\mathcal{D}}$. Fix $0\leq p_1, p_2, p_3, p_4\leq 1$ such that $\sum_i p_i \leq 1$. For each $i\in [n]$ independently, $i\in A\setminus A'$ with probability $p_1$ and $i\in A'\setminus A$ with probability $p_2$, $i \in (A\cap A')\setminus A''$ with probability $p_3$,  and  $i \in  A''$ with probability $p_4$. Note that by choosing $p_1 = p_2 = (1-\alpha)q$, $p_4 = q'$ and $p_3= \alpha q - q'$, we recover the given distribution $\mathcal{D}$ and hence we will fix these values of $p_i$ throughout the claim.
			
			Fix a triple $(A, A', A'')$. Let $p$ and $\tilde{p}$ be the probability masses given to $(A, A', A'')$ by the distributions ${\mathcal{D}}$  and $\tilde{\mathcal{D}}$, respectively. Let $a= |A\setminus A'|$, $b = |A'\setminus A|$, $c = |(A\cap A')\setminus A''|$ and $d = |A''|$.  We have
			$$p = p_1^a\cdot p_2^b\cdot p_3^c\cdot p_4^d\cdot (1-(p_1 +p_2 + p_3+p_4))^{n-(a+b+c+d)}$$
			We can compute $\tilde{p}$ as follows:
			\begin{align*}
				\tilde{p} &= \Pr_{A_0, B_0, B_1}\left[ A_0|_{[n]} = A'', B_0|_{[n]} = A\setminus A'', B_1|_{[n]} = A'\setminus A''\right]\\
				& = \Pr_{A_0}\left[ A_0|_{[n]} = A''\right]\cdot  \Pr_{A_0, B_0}\left[  B_0|_{[n]} = A\setminus A'' \mid A_0|_{[n]} = A'' \right]  \Pr_{A_0, B_1}\left[  B_1|_{[n]} = A'\setminus A''\mid  A_0|_{[n]} = A'' \right]\\
				&= \frac{{ N-n\choose q'N-d}}{{ N \choose q'N}}\cdot \frac{{N-q'N - (n-d) \choose (q-q')N - (a+c)}}{{ N-q'N \choose (q-q')N }}\cdot \frac{{N-q'N - (n - d) \choose (q-q')N - (b+c)}}{{ N-q'N \choose (q-q')N }}\\
				& = (1\pm 2^{-\Omega(n)})\cdot q'^d (1-q')^{n-d} \cdot \left( \frac{q-q'}{1-q'}\right)^{a+c} \left( \frac{1-q}{1-q'}\right)^{n - (a+c+d)}\cdot  \left( \frac{q-q'}{1-q'}\right)^{b+c} \left( \frac{1-q}{1-q'}\right)^{n - (b+c+d)},
			\end{align*}
			where the last equality follows from Claim~\ref{claim:binom}. It can be shown that $p = \tilde{p}$ with the following setting of $p_is$
			$$ p_1 = p_2 =  \frac{(q-q')(1-q)}{(1-q')}, p_3 = \frac{(q-q')^2}{(1-q')}, \mbox{ and }p_4 = q'.$$
			Therefore, when $\alpha q = q'+\frac{(q-q')^2}{(1-q')}$, the above identities hold along with  $p_1 = p_2 = (1-\alpha)q$, $p_4 = q'$ and $p_3= \alpha q - q'$. This finishes the proof of this claim.
		\end{proof}
\end{document}